\definecolor{shadecolor}{rgb}{1, 0, 0}
\newcommand{\noun}[1]{\textsc{#1}}
\numberwithin{equation}{section}
\numberwithin{figure}{section}
\numberwithin{table}{section}
\theoremstyle{plain}
\newtheorem{thm}{\protect\theoremname}[section]
  \theoremstyle{definition}
  \newtheorem{defn}[thm]{\protect\definitionname}
  \theoremstyle{remark}
  \newtheorem{rem}[thm]{\protect\remarkname}
  \theoremstyle{definition}
  \newtheorem{problem}[thm]{\protect\problemname}
  \theoremstyle{plain}
  \newtheorem{prop}[thm]{\protect\propositionname}
  \theoremstyle{plain}
  \newtheorem{lem}[thm]{\protect\lemmaname}
  \theoremstyle{plain}
  \newtheorem{cor}[thm]{\protect\corollaryname}
\newenvironment{lyxcode}
{\par\begin{list}{}{
\setlength{\rightmargin}{\leftmargin}
\setlength{\listparindent}{0pt}
\raggedright
\setlength{\itemsep}{0pt}
\setlength{\parsep}{0pt}
\normalfont\ttfamily}%
 \item[]}
{\end{list}}
  \theoremstyle{remark}
  \newtheorem*{rem*}{\protect\remarkname}
  \providecommand{\corollaryname}{Corollary}
  \providecommand{\definitionname}{Definition}
  \providecommand{\lemmaname}{Lemma}
  \providecommand{\problemname}{Problem}
  \providecommand{\propositionname}{Proposition}
  \providecommand{\remarkname}{Remark}
\providecommand{\theoremname}{Theorem}
\begin{document}

\title{Prequantum transfer operator for symplectic Anosov diffeomorphism}

\author{Frédéric Faure\\
 {\small Institut Fourier, UMR 5582}\\
 {\small 100 rue des Maths, BP74 }\\
{\small FR-38402 St Martin d'Hères, FRANCE}\\
{\small frederic.faure@ujf-grenoble.fr}\\
{\small http://www-fourier.ujf-grenoble.fr/\textasciitilde{}faure}
\and\\
Masato Tsujii{\small }\\
{\small Department of Mathematics, Kyushu University,}\\
{\small Moto-oka 744, Nishi-ku, Fukuoka, 819-0395, JAPAN }\\
{\small tsujii@math.kyushu-u.ac.jp}\\
{\small http://www2.math.kyushu-u.ac.jp/\textasciitilde{}tsujii }}

\date{2013 May 8}
\maketitle
\begin{abstract}
We define the prequantization of a symplectic Anosov diffeomorphism
$f:M\rightarrow M$, which is a $\mathbf{U}(1)$ extension of the
diffeomorphism $f$ preserving an associated specific connection,
and study the spectral properties of the associated transfer operator,
called \emph{prequantum transfer operator}. This is a model for the
transfer operators associated to geodesic flows on negatively curved
manifolds (or contact Anosov flows). 

We restrict the prequantum transfer operator to the $N$-th Fourier
mode with respect to the $\mathbf{U}(1)$ action and investigate the
spectral property in the limit $N\to\infty$, regarding the transfer
operator as a Fourier integral operator and using semi-classical analysis.
In the main result, we show a ``\,\emph{band structure}\,'' of
the spectrum, that is, the spectrum is contained in a few separated
annuli and a disk concentric at the origin.

We show that, with the special (Hölder continuous) potential $V_{0}=\frac{1}{2}\log\left|\det Df_{x}|_{E_{u}}\right|$,
the outermost annulus is the unit circle and separated from the other
parts. For this, we use an extension of the transfer operator to the
Grassmanian bundle. Using Atiyah-Bott trace formula, we establish
the Gutzwiller trace formula with exponentially small reminder for
large time. We show also that, for a potential $V$ such that the
outermost annulus is separated from the other parts, most of the eigenvalues
in the outermost annulus concentrate on a circle of radius $\exp\left(\left\langle V-V_{0}\right\rangle \right)$
where $\left\langle .\right\rangle $ denotes the spatial average
on $M$. The number of these eigenvalues is given by the ``Weyl law'',
that is, $N^{d}\mathrm{Vol}M$ with $d=\frac{1}{2}\mathrm{dim}M$
in the leading order.

We develop a semiclassical calculus associated to the prequantum operator
by defining quantization of observables $\mathrm{Op}_{\hbar}\left(\psi\right)$
in an intrinsic way. We obtain that the semiclassical Egorov formula
of quantum transport is exact. We interpret all these results from
a physical point of view as the emergence of quantum dynamics in the
classical correlation functions for large time. We compare these results
with standard quantization (geometric quantization) in quantum chaos.
\end{abstract}
\newpage

\tableofcontents{}

\global\long\def\real{\mathbb{R}}
 \global\long\def\complex{\mathbb{C}}
\global\long\def\Bargmann{\mathcal{B}}
 \global\long\def\BargmannP{\mathcal{P}}
 \global\long\def\Polynomial{\mathrm{Polynom}}
\global\long\def\Taylor{T}
 \global\long\def\calT{{\mathcal{T}}}
 \global\long\def\checktau{\check{\tau}}
 \global\long\def\hattau{\hat{\tau}}
 \global\long\def\boldT{\mathbf{T}}
 \global\long\def\boldt{\boldsymbol{t}}

\global\long\def\checklambda{\check{\lambda}}
 \global\long\def\hatlambda{\hat{\lambda}}


\global\long\def\Llift{L^{\mathrm{lift}}}
 \global\long\def\prequantumL{\mathcal{L}}
 \global\long\def\prequantumLlift{\mathcal{L}^{\mathrm{lift}}}


\global\long\def\boldI{\mathbf{I}}
 \global\long\def\boldu{\mathbf{u}}
 \global\long\def\boldF{\mathbf{F}}
 \global\long\def\boldDelta{\boldsymbol{\Delta}}
 \global\long\def\boldq{\boldsymbol{q}}


\global\long\def\supp{\mathrm{supp}\,}
 \global\long\def\Im{\mathrm{Im}\,}

\global\long\def\scrX{\mathscr{X}}
 \global\long\def\scrG{\mathscr{G}}


\global\long\def\romet{t}
 \global\long\def\romeq{q}
 \global\long\def\domain{\mathcal{D}}


\global\long\def\BargmannModified{\widetilde{\Bargmann}}

\global\long\def\BargmannPModified{\widetilde{\BargmannP}}

\global\long\def\multiplication{\mathscr{M}}

\newpage

\newpage

\paragraph{}

\section{\label{sec:Introduction-and-results}Introduction and results}

\subsection{Introduction}

We consider a smooth symplectic Anosov diffeomorphism $f:M\rightarrow M$
on a $2d$-dimensional closed symplectic manifold $\left(M,\omega\right)$
as a standard model of \textquotedbl{}chaotic\textquotedbl{} dynamical
system. Following the geometric quantization procedure introduced
by Kostant, Souriau and Kirillov in 1970s', we consider the prequantum
bundle $P\rightarrow M$. This is the $\mathbf{U}(1)$-principal bundle
over $M$ equipped with a connection whose curvature is $\left(-2\pi i\right)\cdot\omega$.
Then we introduce the prequantum map $\tilde{f}:P\rightarrow P$ as
the $\mathbf{U}(1)$-equivariant extension of the map $f$ preserving
the connection. The prequantum map $\tilde{f}$ thus defined is known
to be exponentially mixing%
\footnote{Exponential mixing of the map $\tilde{f}$ is already known \cite{dolgopyat_02}
but is also a direct consequence of results presented in this paper. %
}, that is, any smooth probability density which evolves under the
iteration of $\tilde{f}$ converges weakly towards the uniform equilibrium
distribution on $P$ and the speed of convergence is exponentially
fast if it is measured by a smooth observable. We study the fluctuations
in this convergence to the equilibrium by investigating spectral properties
of the transfer operator $\hat{F}$ associated to the prequantum map
$\tilde{f}$. Following the approach taken by David Ruelle in his
study of expanding dynamical systems\cite{Ruelle1991}, we first show
that the transfer operator displays discrete spectrum, which is sometimes
called \emph{Ruelle-Pollicott resonances}. Precisely we consider the
restriction $\hat{F}_{N}$ of the transfer operator $\hat{F}$ to
the $N$-th Fourier mode with respect to the $\mathbf{U}(1)$ action
on $P$ and show that its natural extension to appropriate generalized
Sobolev spaces of distributions has discrete spectrum. This result
concerning discrete spectrum is already known in the preceding works
\cite{rugh_92},\cite{liverani_02,liverani_04},\cite[theorem 1.1]{baladi_05},\cite[theorem 1]{fred-roy-sjostrand-07}
and will be recalled in Theorem \ref{thm:Discrete-spectrum}. In this
paper we are mainly concerned with the limit $N\rightarrow\infty$
of small wavelength (high Fourier modes). We will use the standard
notation of semiclassical analysis and put throughout this paper:
\[
\hbar:=\frac{1}{2\pi N}
\]
 The new result of this paper is in Theorem \ref{thm:band_structure},
where we show that\textcolor{green}{{} }the spectrum of $\hat{F}_{N}$
has a particular ``band'' structure: for every $N$ large enough,
there is an annulus that contains finitely many (but increasing to
infinity as $N\to\infty$) eigenvalues; they are separated from the
rest of the internal spectrum by a gap under some pinching conditions.
This means that the convergence to the equilibrium mentioned above,
restricted to the $N$-th Fourier mode, is described by a finite rank
operator denoted $\mathcal{F}_{\hbar}:\mathcal{H}_{\hbar}\rightarrow\mathcal{H}_{\hbar}$,
up to relatively small exponentially decaying errors. The finite rank
operator $\mathcal{F}_{\hbar}$ is the spectral restriction of the
prequantum transfer operator $\hat{F}_{N}$ on the external annulus.
We show, in Theorem \ref{thm:Weyl-formula}, that the dimension of
$\mathcal{H}_{\hbar}$ is proportional to $N^{d}$ asymptotically
as%
\footnote{The precise value of $\mathrm{dim}\mathcal{H}_{\hbar}$ is given by
an index formula of Atiyah-Singer in Th. \ref{thm:Weyl-formula}.%
} $N\to\infty$. These results are generalizations of the results in
\cite{fred-PreQ-06} for the linear Arnold cat map to the case of
general non-linear symplectic Anosov diffeomorphisms.

\paragraph{Motivations of the study}

From the construction above, the prequantum map $\tilde{f}:P\rightarrow P$
is partially hyperbolic, that is, hyperbolic in the directions transverse
to the fibers but is neutral (because of equivariance) in the direction
of the fibers. Also note that $\tilde{f}$ preserves the connection
one form on the prequantum bundle $P$ which is a contact form on
$P$ (See Remark \ref{Rem:contact_form} page \pageref{Rem:contact_form}).
These properties of the prequantum map are very similar to those of
the time-$t$-map of the geodesic flow $\phi_{1}:T_{1}^{*}\mathcal{M}\rightarrow T_{1}^{*}\mathcal{M}$
on a closed Riemannian manifold $\mathcal{M}$ with negative curvature,
acting on the unit cotangent bundle $T_{1}^{*}\mathcal{M}$: In the
latter case the time-$t$-map of the geodesic flow is partially hyperbolic
and preserves the canonical Liouville contact one form $\xi dx$ on
$T_{1}^{*}\mathcal{M}$. (See \cite{liverani_contact_04,tsujii_08,tsujii_FBI_10,fred_flow_09}).
With this point of view, the prequantum transfer operator can be considered
as a model of the transfer operators for the geodesic flows on negatively
curved manifolds. One of our objective behind the present work is
to show some band structure of the spectrum for the case of geodesic
flow and extend other results presented in this paper to that case
\cite{faure-tsujii_anosov_flows_13,faure_tsujii_band_CRAS_2013}.
In the special case of manifolds with constant curvature, such a band
structure is readily observed from the classical theorem of Selberg
on zeta functions \cite{selberg_1956}. 

Another motivation already discussed in \cite{fred-PreQ-06} in a
special case is the following observation: The finite rank operator
$\mathcal{F}_{\hbar}$ which describes the long time classical correlation
functions of the map $\tilde{f}$ has the properties of a \textquotedbl{}quantum
map\textquotedbl{} i.e. a \textquotedbl{}quantization of $f$\textquotedbl{}
but with additional interesting properties. It satisfies the Gutzwiller
trace formula with an error term which decreases exponentially fast
in large time, an exact Egorov theorem, etc. Surprisingly this ``quantization''
or quantum behavior, appears here dynamically (after long time) in
the classical correlation functions of the ``classical'' map $\tilde{f}$:
the finite dimensional ``quantum space'' $\mathcal{H}_{\hbar}$
in which $\mathcal{F}_{\hbar}$ acts is defined from the dynamics.
There are many open questions in ``quantum chaos'' for example related
to ``unique quantum ergodicity'' or ``random matrix theory''\cite{nonnenmacher_08}.
These questions can be posed for the family quantum operators $\left(\mathcal{F}_{\hbar}\right)_{\hbar}$considered
here and may be their special properties with respect to the dynamics
may help.

\paragraph{Semiclassical approach}

The general method that we use to obtain the main results is \emph{semiclassical
analysis}. We regard the prequantum transfer operator as a Fourier
Integral Operator (FIO), which means that we consider its action on
wave packets in the high frequency limit $N\to\infty$. From the general
idea in semiclassical analysis, this action is effectively described
by the associated canonical map $(Df^{*})^{-1}$ on the cotangent
space $T^{*}M$ equipped with the symplectic structure $\Omega=dx\wedge d\zeta+\pi^{*}\omega$
(where $dx\wedge d\zeta$ stands for the canonical symplectic structure
on $T^{*}M$ and $\pi^{*}\omega$ is the pull-back of $\omega$ on
$T^{*}M$). For the action of the canonical map $(Df^{*})^{-1}$,
the non-wandering set is the zero section $K\subset T^{*}M$ and is
called the \emph{trapped set}. The additional term $\pi^{*}\omega$
in $\Omega$ makes $K$ a symplectic submanifold. The trapped set
is therefore symplectic and normally hyperbolic. We will see that
these facts are the core of our argument and give the band structure
of the spectrum in the main theorem.

\paragraph{Structure of the paper}

In Section \ref{sub:Definitions} we define precisely the prequantum
map $\tilde{f}$ and the prequantum transfer operator $\hat{F}$ that
are uniquely associated to the Anosov map $f$. In Section \ref{sub:1.3}
we present the main results concerning the discrete spectrum of $\hat{F}_{N}$
(after Fourier decomposition of $\hat{F}$ on Fourier component $N=1/\left(2\pi\hbar\right)\in\mathbb{Z}$)
acting on a Hilbert space $\mathcal{H}_{\hbar}^{r}$ called the anisotropic
Sobolev space. These results are summarized on Figure \ref{fig:annuli}.
The spectral restriction of the operator \textbf{$\hat{F}_{N}$} on
this external band will be called the \textbf{quantum operator }and
denoted $\hat{\mathcal{F}}_{\hbar}$ in Definition \ref{def:quantum_operator}.
The associated spectral projector is denoted by$\Pi_{\hbar}$. In
Section \ref{sub:Spectral-results-with-Grassman} we show how to extend
the results so that the potential function $V$ that enters in the
definition \ref{def:Prequant_op} of the transfer operator $\hat{F}$
can be chosen such that the external band of resonances concentrates
on the unit circle in the limit $\hbar\rightarrow0$, although this
requires that $V$ is only Hölder continuous. In Section \ref{sub:1.5Gutzwiller-trace-formula}
we show that the quantum operator $\mathcal{\hat{F}}_{\hbar}^{n}$
satisfies the Gutzwiller trace formula with an error that decays exponentially
fast as $n\rightarrow\infty$. We discuss the fact that this property
determines the spectrum of $\hat{\mathcal{F}}_{\hbar}$ and shows
somehow that the family of operators $\left(\hat{\mathcal{F}}_{\hbar}\right)_{\hbar}$
is a kind of ``natural quantization'' of the Anosov symplectic map
$f$. In Section \ref{sub:Emergence-of-quantum} we pursue the exploration
of the properties of this quantum operator $\hat{\mathcal{F}}_{\hbar}$
obtained by the spectral restriction of $\hat{F}_{N}$ by the spectral
projector $\Pi_{\hbar}$. In Theorem \ref{Th1.7} it is shown that
$\hat{\mathcal{F}}_{\hbar}$ describes the exponential decay of correlations
of the Anosov map $f$. In Section \ref{sub:1.7Semiclassical-calculus-on}
we show in which sense the quantum operator $\hat{\mathcal{F}}_{\hbar}$
is a kind of ``quantum map'': it satisfies an exact Egorov formula
with respect to an algebra of quantum observables $\mathrm{Op}_{\hbar}\left(\psi\right)$.
For this, we define a new kind of quantization procedure $\mathrm{Op}_{\hbar}:\psi\in C^{\infty}\left(M\right)\rightarrow\mathrm{Op}_{\hbar}\left(\psi\right)\in\mathrm{End}\left(\mathcal{H}_{\hbar}\right)$
which satisfies most of the usual ``axioms of quantization''. In
particular the spectral projector on the external band is $\Pi_{\hbar}=\mathrm{Op}_{\hbar}\left(1\right)$.
In Theorem \ref{prop:express_Op_Psi}, $\mathrm{Op}_{\hbar}\left(\psi\right)$
is expressed as an integral over $x\in M$ of $\psi\left(x\right)\cdot\pi_{x}$
where $\pi_{x}$ is a rank one projector over a ``localized wave
packet''%
\footnote{This localization property is with respect to the anisotropic Sobolev
space $\mathcal{H}_{\hbar}^{r}$.%
} at position $x\in M$. In Section \ref{sub:1.8Geometric-quantization-of}
we consider the usual ``geometric quantization'' of the map $f$
in the sense of Toeplitz quantization and compare it with the quantum
operator $\hat{\mathcal{F}}_{\hbar}$ or ``natural quantization''.
We show that both quantization coincide up a to small error in the
limit $\hbar\rightarrow0$. In Section 2 we present the main ideas
of semiclassical analysis used in the proofs. In Section \ref{sec:Covariant-derivative-}
we present additional results concerning the spectrum of the rough
Laplacian operator. This is used to discuss geometric (Toeplitz) quantization.
Subsequent sections contain the proofs.

\paragraph{Acknowledgments:}

F. Faure would like to thank Yves Colin de Verdière\noun{, }Louis
Funar\noun{, }Sébastien Gouëzel, Colin Guillarmou\noun{, }Malik Mezzadri,
Johannes Sjöstrand for interesting discussions related to this work.
During the period of this research project, M. Tsujii was partially
supported by Grant-in-Aid for Scientific Research (B) (No.22340035)
from Japan Society for the Promotion of Science. F. Faure has been
supported by ``Agence Nationale de la Recherche'' under the grants
JC05\_52556 and ANR-08-BLAN-0228-01.

\subsection{Definitions\label{sub:Definitions}}

\subsubsection{Symplectic Anosov map}

Let $M$ be a $C^{\infty}$ closed connected symplectic manifold of
dimension $2d$ with symplectic two form $\omega$. Let $f:M\rightarrow M$
be a $C^{\infty}$ symplectic Anosov diffeomorphism, i.e. a $C^{\infty}$
Anosov diffeomorphism such that $f^{*}\omega=\omega$. We recall the
definition of an Anosov diffeomorphism:

\begin{center}{\color{red}\fbox{\color{black}\parbox{16cm}{
\begin{defn}
\cite[p.263]{katok_hasselblatt} \label{def:Anosov_diffeo}A diffeomorphism
$f:M\rightarrow M$ is said to be \emph{Anosov} if there exists a
$C^{\infty}$ Riemannian metric $g$ on $M$, an $f$-\emph{invariant}
continuous decomposition of $TM$,
\begin{equation}
T_{x}M=E_{u}\left(x\right)\oplus E_{s}\left(x\right),\quad\forall x\in M\label{eq:foliation}
\end{equation}
and a constant $\lambda>1$, such that, for any $x\in M$, hold
\begin{eqnarray}
\left|D_{x}f\left(v_{s}\right)\right|_{g} & \leq & \frac{1}{\lambda}\left|v_{s}\right|_{g}\quad\forall v_{s}\in E_{s}\left(x\right),\quad\mbox{and}\label{eq:def_dynamics}\\
\left|D_{x}f^{-1}\left(v_{u}\right)\right|_{g} & \leq & \frac{1}{\lambda}\left|v_{u}\right|_{g}\quad\forall v_{u}\in E_{u}\left(x\right).\nonumber 
\end{eqnarray}
The subbundle $E_{s}$ (resp.$E_{u}$) in which $f$ is uniformly
contracting (resp. expanding) is called the stable (resp. unstable)
sub-bundle. See figure \ref{fig:Anosov map}.
\end{defn}
}}}\end{center}

\begin{figure}[h]
\centering{}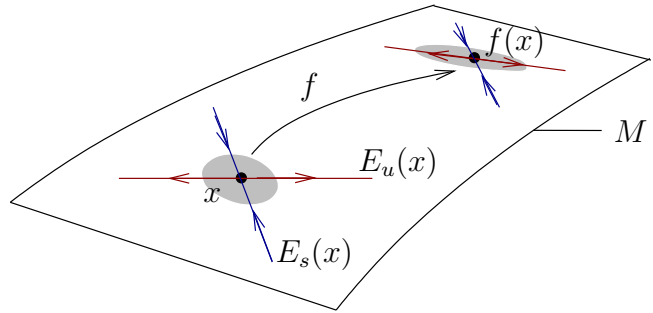\caption{\label{fig:Anosov map}A symplectic Anosov map $f$}
\end{figure}

\begin{rem}
\label{Remark:Holder_exp_beta}
\begin{enumerate}
\item The subspaces $E_{u}(x)$ and $E_{s}(x)$ do not depend smoothly on
the point $x$ in general. However it is known that they are Hölder
continuous in $x$ with some Hölder exponent \cite{pesin_04}. In
what follows, we assume that the Hölder exponent is 
\begin{equation}
0<\beta<1.\label{eq:def_beta}
\end{equation}
The subspaces $E_{u}\left(x\right)$ and $E_{s}\left(x\right)$ are
Lagrangian%
\footnote{To prove that $E_{s}\left(x\right)$ is Lagrangian, let $u,v\in E_{s}\left(x\right)$;
we have $\omega\left(u,v\right)=\omega\left(D_{x}f^{n}\left(u\right),D_{x}f^{n}\left(v\right)\right)\underset{n\rightarrow+\infty}{\rightarrow}0$.
Similarly for $E_{u}$. So $E_{s}$ and $E_{u}$ are isotropic subspaces,
$E_{u}\oplus E_{s}=TM$, hence they are Lagrangian.%
} linear subspace of $T_{x}M$ and both have dimension $d$.
\item The\textbf{ }\emph{Arnold cat map}\cite{arnold-avez} is a simple
example of a symplectic Anosov diffeomorphism on the torus $\mathbb{T}^{2}=\mathbb{R}^{2}/\mathbb{Z}^{2}$,
\begin{equation}
f_{0}\left(\begin{array}{c}
q\\
p
\end{array}\right)=\left(\begin{array}{cc}
2 & 1\\
1 & 1
\end{array}\right)\left(\begin{array}{c}
q\\
p
\end{array}\right)\quad\mod\;\mathbb{Z}^{2}.\label{eq:f0_cat_map}
\end{equation}
It preserves the symplectic form $\omega=dq\wedge dp$. If $h:M\to M$
is a diffeomorphism close enough to identity in the $C^{1}$ norm
and preserves the symplectic form $\omega,$ the \emph{perturbed cat
map} 
\begin{equation}
f\left(x\right):=h\left(f_{0}\left(x\right)\right)\label{eq:perturbated_cat_map}
\end{equation}
is also a (probably non-linear) symplectic Anosov diffeomorphism\cite[p.266]{katok_hasselblatt}.
Similarly, we get examples of symplectic Anosov diffeomorphisms on
$\mathbb{T}^{2d}$ from any symplectic linear map $f_{0}\in\mathrm{Sp}{}_{2d}\left(\mathbb{Z}\right)$
with no eigenvalues on the unit circle. 
\end{enumerate}
\end{rem}

\subsubsection{\label{sub:Hypothesis}The prequantum bundle and the lift map $\tilde{f}$}

A prequantum bundle is a $\mathbf{U}(1)$-principal bundle $P$ equipped
with a specific connection. In a few paragraphs below, we recall the
definition of a $\mathbf{U}(1)$-principal bundle and that of a connection
on it. (For the detailed account, we refer \cite{woodhouse2}.) The
one-dimensional unitary group $\mathbf{U}(1)$ is the multiplicative
group of complex numbers of the form $e^{i\theta},\theta\in\mathbb{R}$.
A \textbf{$\mathbf{U}(1)$}-principal bundle $P$ over $M$ is a manifold
with a free action of $\mathbf{U}(1)$, written 
\begin{equation}
p\in P\rightarrow\left(e^{i\theta}p\right)\in P,\label{eq:action_U1}
\end{equation}
such that the quotient space is $M=P/\mathbf{U}(1)$. We write $\pi:P\rightarrow M$
for the projection map. From the definition, the \textbf{$\mathbf{U}(1)$}-principal
bundle $P$ has a local product structure over $M$: There exist
a finite cover of $M$ by simply connected open subsets $U_{\alpha}\subset M$,
$\alpha\in I$, and smooth sections $\tau_{\alpha}:U_{\alpha}\rightarrow P$
on each of $U_{\alpha}$, called a \emph{local smooth section}; A
\emph{local trivialization} of $P$ over $U_{\alpha}$ is defined
as the diffeomorphism

\begin{equation}
T_{\alpha}:\begin{cases}
U_{\alpha}\times\mathbf{U}\left(1\right) & \rightarrow\pi^{-1}\left(U_{\alpha}\right)\\
\left(x,e^{i\theta}\right) & \rightarrow e^{i\theta}\tau_{\alpha}\left(x\right)
\end{cases}\label{eq:local_trivialization-1}
\end{equation}
A \emph{connection} on $P$ is a differential one form $A\in C^{\infty}\left(P,\Lambda^{1}\otimes\left(i\mathbb{R}\right)\right)$
on $P$ with values in the Lie algebra $\mathfrak{u}\left(1\right)=i\mathbb{R}$
which is invariant by the action of $\mathbf{U}(1)$ and normalized
so that 
\begin{equation}
A\left(\frac{\partial}{\partial\theta}\right)=i\label{eq:normalization_A}
\end{equation}
 where $\frac{\partial}{\partial\theta}$ denotes the vector field
on $P$ generating the action of $\mathbf{U}(1)$. Consequently the
pull-back of the connection $A$ on $P$ by the trivialization map
(\ref{eq:local_trivialization-1}) is written as
\begin{equation}
T_{\alpha}^{*}A=id\theta-i2\pi\eta_{\alpha}\label{eq:def_eta-1}
\end{equation}
where $\eta_{\alpha}\in C^{\infty}\left(U_{\alpha},\Lambda^{1}\right)$
is a one-form on $U_{\alpha}$ which depends on the choice of the
local section $\tau_{\alpha}$. A different local section $\tau_{\beta}:U_{\beta}\to P$
with $U_{\alpha}\bigcap U_{\beta}\neq\emptyset$ is written as $\tau_{\beta}=e^{i\chi}\tau_{\alpha}$
with using a function $\chi:U_{\alpha}\bigcap U_{\beta}\rightarrow\mathbb{R}$
and hence the connection $A$ pulled-back by the corresponding trivialization
$T_{\beta}$ is written as (\ref{eq:def_eta-1}) with  
\begin{equation}
\eta_{\beta}=\eta_{\alpha}-\frac{1}{2\pi}d\chi\quad\mbox{on \ensuremath{U_{\alpha}\bigcap U_{\beta}}}.\label{eq:gauge_transform}
\end{equation}
The \emph{curvature} of the connection $A$ is the two form $\Theta=dA$
on $P$. In the local trivialization (\ref{eq:local_trivialization-1}),
we have $T_{\alpha}^{*}\Theta=-i2\pi\left(d\eta_{\alpha}\right)$
and (\ref{eq:gauge_transform}) implies that $d\eta_{\alpha}=d\eta_{\beta}$.
Therefore the curvature two form is written as 
\[
\Theta=-i\left(2\pi\right)\left(\pi^{*}\tilde{\omega}\right)
\]
 where $\tilde{\omega}=d\eta_{\alpha}$ is a closed two form on $M$
independent of the trivialization.

Since there is a given symplectic two form $\omega$ on $M$ in our
setting, we naturally require below in (\ref{eq:Curvature_omega})
that the two form $\tilde{\omega}$ coincides with $\omega$ and then
\begin{equation}
\omega=d\eta_{\alpha}.\label{eq:omega_eta}
\end{equation}

For the construction of the prequantum bundle and prequantum transfer
operator, we will need the following two assumptions:

\begin{center}{\color{red}\fbox{\color{black}\parbox{16cm}{
\begin{description}
\item [{Assumption}] 1: The cohomology class $\left[\omega\right]\in H^{2}\left(M,\mathbb{R}\right)$
represented by the symplectic form $\omega$ is integral, that is,
\begin{equation}
\left[\omega\right]\in H^{2}\left(M,\mathbb{Z}\right)\label{eq:integral_assumption-1}
\end{equation}

\item [{Assumption}] 2: The integral homology group $H_{1}\left(M,\mathbb{Z}\right)$
has no torsion part and $1$ is not an eigenvalue of the linear map
$f_{*}:H_{1}\left(M,\mathbb{R}\right)\rightarrow H_{1}\left(M,\mathbb{R}\right)$
induced by $f:M\rightarrow M$.
\end{description}
}}}\end{center}
\begin{rem}
 The second assumption above is not restrictive and may not be necessary.
In fact this hypothesis is conjectured to be true in general. For
the case $M=\mathbb{T}^{2d}$, this is always satisfied.
\end{rem}
\begin{center}{\color{blue}\fbox{\color{black}\parbox{16cm}{
\begin{thm}
\label{thm:Bundle-P_map_f_tilde} Under Assumption 1 above, there
exists a $\mathbf{U}(1)$-principal bundle $\pi:P\rightarrow M$ and
a connection $A\in C^{\infty}\left(P,\Lambda^{1}\otimes\left(i\mathbb{R}\right)\right)$
on $P$ such that the curvature two form $\Theta=dA$ satisfies
\begin{equation}
\Theta=-i\left(2\pi\right)\left(\pi^{*}\omega\right).\label{eq:Curvature_omega}
\end{equation}
If we put Assumption 2 in addition, we can choose the connection $A$
as above so that there exists an equivariant lift $\tilde{f}:P\rightarrow P$
of the map $f:M\rightarrow M$ preserving the connection $A$, that
is:
\begin{equation}
\left(\pi\circ\tilde{f}\right)\left(p\right)=\left(f\circ\pi\right)\left(p\right),\quad\forall p\in P\qquad:\tilde{f}\mbox{ is a lift of \ensuremath{f}.}\label{eq:lift_of_f}
\end{equation}
\begin{equation}
\tilde{f}\left(e^{i\theta}p\right)=e^{i\theta}\tilde{f}\left(p\right),\quad\forall p\in P,\forall\theta\in\mathbb{R}\qquad:\mbox{ \ensuremath{\ensuremath{\tilde{f}}}\:\ is equivariant w.r.t. the \ensuremath{\mathbf{U}(1)}action. }\label{eq:equivariance_f_tilde}
\end{equation}
\begin{equation}
\tilde{f}^{*}A=A\qquad:\mbox{\ensuremath{\tilde{f}}\:\ preserves the connection \ensuremath{A}.}\label{eq:preserve_connection}
\end{equation}

\end{thm}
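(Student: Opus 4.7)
Part 1 is the classical Kostant--Souriau--Weil prequantization construction. I would take a good open cover $\{U_\alpha\}$ of $M$; on each contractible $U_\alpha$ the Poincar\'e lemma gives a 1-form $\eta_\alpha$ with $d\eta_\alpha = \omega$. On each overlap $U_\alpha\cap U_\beta$ the difference $\eta_\alpha - \eta_\beta$ is closed hence exact, equal to $d\chi_{\alpha\beta}/(2\pi)$ for some smooth $\chi_{\alpha\beta}$. The integrality hypothesis $[\omega]\in H^2(M,\mathbb{Z})$ ensures that, after adjusting the $\chi_{\alpha\beta}$ by constants in $2\pi\mathbb{Z}$, the family $\{e^{i\chi_{\alpha\beta}}\}$ satisfies the $\mathbf{U}(1)$-cocycle condition on triple overlaps. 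These transition functions define the principal bundle $P\to M$, and the local 1-forms $i\,d\theta - i2\pi\eta_\alpha$ on $U_\alpha\times\mathbf{U}(1)$ are compatible under (\ref{eq:gauge_transform}) and patch into a global connection $A$ with $dA = -i2\pi\pi^*\omega$ as required.

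For Part 2, I would first construct some equivariant lift $\tilde f_0$ of $f$, ignoring the connection. From $f^*\omega = \omega$ we have $f^*[\omega] = [\omega]$ in $H^2(M,\mathbb{R})$. The torsion-free hypothesis on $H_1(M,\mathbb{Z})$ implies, via the universal coefficient theorem, that $H^2(M,\mathbb{Z})$ is also torsion-free, so this equality already lifts to $H^2(M,\mathbb{Z})$; in particular $f^*c_1(P) = c_1(P)$. Hence $f^*P$ and $P$ are isomorphic as $\mathbf{U}(1)$-bundles, and any such bundle isomorphism yields an equivariant diffeomorphism $\tilde f_0:P\to P$ covering $f$.

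The defect $\tilde f_0^* A - A$ is closed (since $\tilde f_0^*\Theta = -i2\pi\pi^* f^*\omega = \Theta$), $\mathbf{U}(1)$-invariant, and vanishes on $\partial/\partial\theta$ (by equivariance of $\tilde f_0$ and the normalization (\ref{eq:normalization_A})); it therefore descends to $-i 2\pi \pi^*\beta$ for some closed real 1-form $\beta$ on $M$. I then have two remaining freedoms: replacing $A$ by $A + i2\pi\pi^*\alpha$ for a closed 1-form $\alpha$ on $M$ (which preserves the curvature and shifts $\beta$ by $\alpha - f^*\alpha$), and replacing $\tilde f_0$ by $p\mapsto e^{i\phi(\pi(p))}\tilde f_0(p)$ for a smooth $e^{i\phi}:M\to\mathbf{U}(1)$ (which shifts $\beta$ by $-(1/2\pi)f^*d\phi$). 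The question reduces to whether $[\beta]\in H^1(M,\mathbb{R})$ lies in $(I-f^*)H^1(M,\mathbb{R})+f^*H^1(M,\mathbb{Z})$, the second summand coming from the integer periods of $d\phi/(2\pi)$. Assumption 2 says $1$ is not an eigenvalue of $f_*$ on $H_1(M,\mathbb{R})$, hence not of the dual action $f^*$ on $H^1(M,\mathbb{R})$, so $I-f^*$ is invertible and the first summand already exhausts $H^1(M,\mathbb{R})$. One can therefore take $\phi$ constant and solve for $[\alpha]$, and the residual exact part of $\beta$ is absorbed by a further choice of $\phi$ using that $f^*dg = d(g\circ f)$. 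The step I expect to be most delicate is this last cohomological bookkeeping, in particular respecting the integer-lattice constraint on $d\phi/(2\pi)$; also mildly fiddly is the simultaneous adjustment of all the $\chi_{\alpha\beta}$ in Part 1 to produce an honest $\mathbf{U}(1)$-valued cocycle.
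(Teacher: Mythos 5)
Your proof is correct. For Part 1 you give the standard integrality argument (the paper cites this to the literature). For Part 2 you take a somewhat different route from the paper's. The paper (Lemmas \ref{pro:map_f_tilde-1} and \ref{lm:choice_of_flat_connection}) never first fixes a lift of $f$ ignoring the connection: it constructs the candidate $\tilde f$ directly by horizontal transport from a basepoint, shows that the construction is well-defined iff the holonomy condition $h_A(f(\gamma))=h_A(\gamma)$ holds, repackages this as triviality of a homomorphism $r_A:H_1(M,\mathbb{Z})\to\mathbf{U}(1)$, and then under Assumption~2 kills $r_A$ by adding a flat connection $A_0$; the torsion-freeness of $H_1$ enters there to lift $r_A$ to a real-valued $R_A$. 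You instead construct an equivariant lift $\tilde f_0$ first via $f^*c_1(P)=c_1(P)$, using the torsion-freeness of $H^2(M,\mathbb{Z})$ (a different but equivalent place to invoke the same hypothesis, via UCT), and then reduce to a cohomological computation with the defect closed $1$-form $\beta$ on $M$; in fact $r_A=\exp\bigl(i2\pi\langle[\beta],\cdot\rangle\bigr)$, so the two obstructions are the same object. Both routes rest on the invertibility of $I-f^*$ on $H^1(M,\mathbb{R})$ (dually, $f_*-I$ on $H_1$) from Assumption~2. Your decomposition into the two freedoms (curvature-preserving shifts of $A$ and gauge transformations of $\tilde f_0$) makes the structure of the obstruction more transparent, whereas the paper's argument is more explicit about how $\tilde f$ is actually built. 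One minor slip in your writeup: with the convention $\tilde f_1(p)=e^{i\phi(\pi(p))}\tilde f_0(p)$, the shift of $\beta$ is $-\tfrac{1}{2\pi}\,d\phi$, not $-\tfrac{1}{2\pi}\,f^*d\phi$; this is immaterial for the cohomological argument since $f^*$ is an isomorphism on $H^1(M,\mathbb{Z})$ anyway.
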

(See Figure \ref{fig:map f_tilde}.)

}}}\end{center}

The proof of Theorem \ref{thm:Bundle-P_map_f_tilde} is given in Section
\ref{sub:Proof-of-Theorem_bundle}.

\begin{center}{\color{red}\fbox{\color{black}\parbox{16cm}{
\begin{defn}
The $\mathbf{U}(1)$-principal bundle $\pi:P\rightarrow M$ equipped
with the connection $A\in C^{\infty}\left(P,\Lambda^{1}\otimes\left(i\mathbb{R}\right)\right)$
satisfying (\ref{eq:Curvature_omega}) is called \emph{prequantum
bundle} over the symplectic manifold $(M,\omega)$. The map $\tilde{f}:P\rightarrow P$
satisfying the conditions (\ref{eq:lift_of_f}),(\ref{eq:equivariance_f_tilde})
and (\ref{eq:preserve_connection}) is called \emph{prequantum map}. 
\end{defn}
}}}\end{center}

\begin{figure}[h]
\centering{}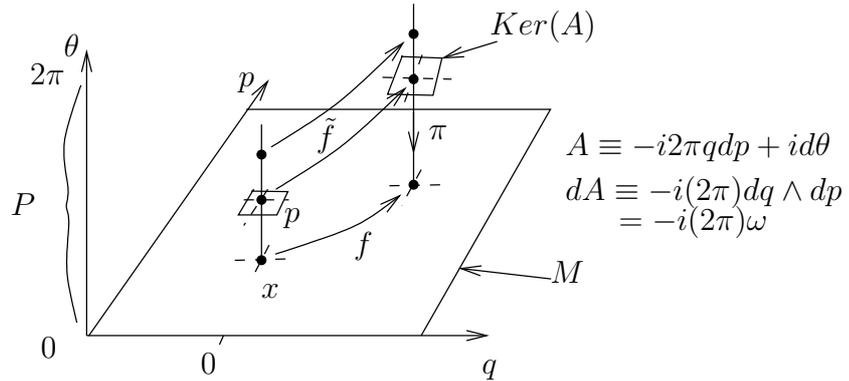\caption{\label{fig:map f_tilde} A picture of the prequantum bundle $P\rightarrow M$
in the case of $M=\mathbb{T}^{2}$, with connection one form $A$
and the prequantum map $\tilde{f}:P\rightarrow P$ which is a lift
of $f:M\rightarrow M$. A fiber $P_{x}\equiv U\left(1\right)$ over
$x\in M$ is represented here as a segment $\theta\in[0,2\pi[$. The
plane at a point $p$ represents the horizontal space $H_{p}P=\mbox{Ker}\left(A_{p}\right)$
which is preserved by $\tilde{f}$. These plane form a non integrable
distribution with curvature given by the symplectic form $\omega$.}
\end{figure}

\begin{rem}
\textbf{(}Uniqueness of the prequantum bundle and the prequantum map\textbf{)}
The prequantum bundle $P$ is unique (as a smooth manifold) if it
exists, because it is determined by its first Chern class $c_{1}\left(P\right)=\left[\omega\right]\in H^{2}\left(M,\mathbb{Z}^{2}\right)$.
However the connection $A$ on the prequantum bundle $P$ is not unique.
In the proof of the theorem above, we will explicitly show that there
may be finitely many connections $A$ which satisfy the condition
(\ref{eq:Curvature_omega}) and they differ from each other by a flat
connection. Once the prequantum bundle $P$ and the connection $A$
on it is given, the lifted map $\tilde{f}$ is unique up to a global
phase $e^{i\theta_{0}}\in\mathbf{U}(1)$, i.e. another map $\tilde{g}$
satisfies the conditions in (2) of Theorem \ref{thm:Bundle-P_map_f_tilde}
if and only if $\tilde{g}=e^{i\theta_{0}}\tilde{f}$ for some $e^{i\theta_{0}}\in\mathbf{U}(1)$.
\begin{rem}
\label{Rem:contact_form}Let $\alpha:=\frac{i}{2\pi}A$. Then the
differential $(2d+1)$-form 
\begin{equation}
\mu_{P}:=\frac{1}{d!}\alpha\wedge\left(d\alpha\right)^{d}\label{eq:volume_form_on_P}
\end{equation}
is a non-degenerate volume form on $P$. This means that $\alpha$
is a \emph{contact one form} on $P$ preserved by $\tilde{f}$.
\end{rem}
\end{rem}

\begin{rem}
\label{Rem:Action_of_PerOrb}Suppose that $x\in M$ is a periodic
point of the map $f$ with period $n\in\mathbb{N}$, $n\geq1$, i.e
$x=f^{n}\left(x\right)$. Then if $p\in\pi^{-1}\left(x\right)$ is
in the fiber, the condition (\ref{eq:lift_of_f}) implies that $\left(\tilde{f}\right)^{n}\left(p\right)\in\pi^{-1}\left(x\right)$
lies in the same fiber and therefore differ by a phase:
\begin{equation}
\left(\tilde{f}\right)^{n}\left(p\right)=e^{i2\pi S_{n,x}}p\label{eq:def_action_Snx}
\end{equation}
with $S_{n,x}\in\mathbb{R}/\mathbb{Z}$ called the \textbf{action
of the periodic point $x$}, see Figure \ref{fig:Action_periodic_orbit}.
These actions are important quantities in semiclassical analysis and
will appear in the Gutzwiller trace formula in (\pageref{eq:Gutz_formula_large_time}).
\end{rem}
\begin{figure}[h]
\centering{}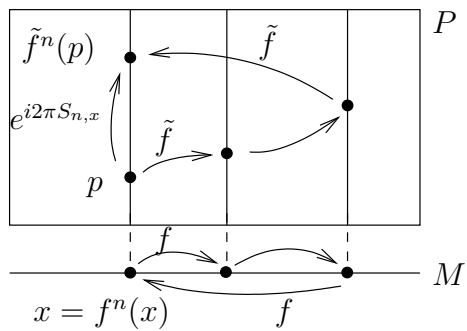\caption{\label{fig:Action_periodic_orbit}Action of a periodic point $x=f^{n}\left(x\right)$.}
\end{figure}

\subsubsection{The prequantum transfer operator $\hat{F}$ and the reduced operator
$\hat{F}_{N}$}

As usual in dynamical system theory, we consider the transfer operator
associated to the prequantum map $\tilde{f}$:

\begin{center}{\color{red}\fbox{\color{black}\parbox{16cm}{
\begin{defn}
\label{def:Prequant_op}Let $V\in C^{\infty}\left(M\right)$ be a
real-valued smooth function, called \emph{potential}. The\textbf{
}\emph{prequantum transfer operator} is defined as
\begin{equation}
\hat{F}:\begin{cases}
C^{\infty}\left(P\right) & \rightarrow C^{\infty}\left(P\right)\\
u & \rightarrow\hat{F}(u)=e^{V\circ\pi}\left(u\circ\tilde{f}^{-1}\right)
\end{cases}\label{eq:def_prequantum_operator_F}
\end{equation}
where $V\circ\pi\in C^{\infty}\left(P\right)$ is the function $V$
lifted on $P$.
\end{defn}
}}}\end{center}
\begin{rem}
The fact that $\tilde{f}^{-1}$ appears instead of $\tilde{f}$ in
(\ref{eq:def_prequantum_operator_F}) is a matter of choice. In our
choice, $\tilde{f}$ maps the support of $u$ to that of $\hat{F}u$.
\end{rem}
From the equivariance property (\ref{eq:equivariance_f_tilde}), the
prequantum transfer operator commutes with the action of $U(1)$ on
functions on $P$ and therefore is naturally decomposed into each
Fourier mode with respect to the $U(1)$ action:

\begin{center}{\color{red}\fbox{\color{black}\parbox{16cm}{
\begin{defn}
For a given $N\in\mathbb{Z},$ we consider the space of functions
in the $N$-th Fourier mode 
\begin{equation}
C_{N}^{\infty}\left(P\right):=\left\{ u\in C^{\infty}\left(P\right)\,\mid\,\forall p\in P,\forall\theta\in\mathbb{R},\quad u\left(e^{i\theta}p\right)=e^{iN\theta}u\left(p\right)\right\} .\label{eq:def_C_N_P}
\end{equation}
The prequantum transfer operator $\hat{F}$ restricted to $C_{N}^{\infty}\left(P\right)$
is denoted by:
\begin{equation}
\hat{F}_{N}:=\hat{F}_{/C_{N}^{\infty}\left(P\right)}:\quad C_{N}^{\infty}\left(P\right)\rightarrow C_{N}^{\infty}\left(P\right).\label{eq:def_F_N_on_P}
\end{equation}

\end{defn}
}}}\end{center}
\begin{rem}
The complex conjugation maps $C_{N}^{\infty}\left(P\right)$ to $C_{-N}^{\infty}\left(P\right)$
and commutes with $\hat{F}$. It is therefore enough to study $\hat{F}_{N}$
with $N\geq0$. 
\begin{rem}
\label{Rem:line_bundle_terminology}The space of equivariant functions
$C_{N}^{\infty}\left(P\right)$ defined in (\ref{eq:def_C_N_P}) can
be identified with the space of smooth sections of an associated Hermitian
complex line bundle $L^{\otimes N}$ over $M$ (i.e. the $N$ tensor
power of a line bundle $L\rightarrow M$) with covariant derivative
$D$, called the \emph{prequantum line bundle}\textbf{ }i.e. we have
\[
C_{N}^{\infty}\left(P\right)\cong C^{\infty}\left(M,L^{\otimes N}\right).
\]
See \cite[p.502, eq.(6.1)]{taylor_tome2}. In order to simplify the
presentation we will not use this identification in this paper although
it will be present implicitly. Notice however that most of references
about geometric quantization use the ``line bundle terminology''.
\end{rem}
\end{rem}
In this paper the main object of study is the resonance spectrum of
the operator $\hat{F}_{N}$, (\ref{eq:def_F_N_on_P}), in the limit
$N\rightarrow\infty$. For $N>0$, we set
\begin{equation}
\hbar=\frac{1}{2\pi N}.\label{eq:def_hbar}
\end{equation}
This new variable $\hbar$ is in one-to-one correspondence to $N$,
and $\hbar\to+0$ as $N\to\infty$. We introduce it for convenience
in referring some argument in semi-classical analysis where $\hbar$
is regarded as the Plank constant and the limit $\hbar\to+0$ is considered. 
\begin{rem}
In the following, we will confuse the parameters $N$ and $\hbar$
in the notation. For instance, the operator $\hat{F}_{N}$ will be
written $\hat{F}_{\hbar}$ sometimes.\end{rem}

\subsection{\label{sub:1.3}Results on the spectrum of the prequantum operator
$\hat{F}_{N}$}

The following theorem has been obtained essentially in the works of
Rugh \cite{rugh_92}, Liverani et al.\cite{liverani_02,liverani_04},
Baladi et al.\cite[Theorem 1.1]{baladi_05}, Faure et al. \cite[theorem 1]{fred-roy-sjostrand-07}.
The method employed in the present paper is close to the semiclassical
approach given in \cite[Theorem 1]{fred-roy-sjostrand-07}. Before
giving the theorem, let us mention that the transfer operator $\hat{F}_{N}$
has been defined on the space of smooth functions $C_{N}^{\infty}\left(P\right)$
and can be extended by duality to the distributions space $\mathcal{D}_{N}'\left(P\right)$.

\begin{center}{\color{blue}\fbox{\color{black}\parbox{16cm}{
\begin{thm}
\textbf{\label{thm:Discrete-spectrum}``Discrete spectrum of prequantum
transfer operators''.} For any $N\in\mathbb{Z}$, there exists a
family of Hilbert spaces $\mathcal{H}_{N}^{r}\left(P\right)$ for
arbitrarily large $r>0$, called anisotropic Sobolev space such that
\[
C_{N}^{\infty}\left(P\right)\subset\mathcal{H}_{N}^{r}\left(P\right)\subset\mathcal{D}'_{N}\left(P\right)
\]
 and such that the operator $\hat{F}_{N}$ extends to a bounded operator
\[
\hat{F}_{N}:\mathcal{H}_{N}^{r}\left(P\right)\rightarrow\mathcal{H}_{N}^{r}\left(P\right),
\]
and its essential spectral radius $r_{ess}\left(\hat{F}_{N}\right)$
is bounded by $\varepsilon_{r}:=\frac{1}{\lambda^{r}}\max e^{V}$,
which shrinks to zero if $r\rightarrow+\infty$. The discrete eigenvalues
of $\hat{F}_{N}$ on the domain $\left|z\right|\geq\varepsilon_{r}$
(and their associated eigenspaces) are independent on the choice of
$r$ and are therefore intrinsic to the Anosov diffeomorphism $f$.
These discrete eigenvalues $\mbox{Res}\left(\hat{F}_{N}\right):=\left\{ \lambda_{i}\right\} _{i}\subset\mathbb{C}^{*}$
are called \textbf{Ruelle-Pollicott resonances}. The definition of
the space $\mathcal{H}_{N}^{r}\left(P\right)$ depends on the diffeomorphism
Anosov $f$ but do not depend on the potential function $V$.
\end{thm}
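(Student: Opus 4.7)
The plan is to follow the semiclassical/microlocal scheme of \cite{fred-roy-sjostrand-07}, adapted to the $\mathbf{U}(1)$-equivariant setting. First I would identify $C_N^\infty(P)$ with the space $C^\infty(M,L^{\otimes N})$ of smooth sections of the prequantum line bundle (see Remark \ref{Rem:line_bundle_terminology}), which reduces the problem to studying a transfer operator over the compact base $M$ twisted by a unitary connection. The relevant canonical map is the symplectic lift $F(x,\xi)=(f(x),(D_xf^{t})^{-1}\xi)$ on $T^*M$; by the Anosov hypothesis, the zero section $M\subset T^*M$ is the unique trapped set of $F$ and the map is uniformly hyperbolic normally to it, with stable/unstable directions along $M$ given by the annihilators $E_s^*\oplus E_u^*$ of $E_u$ and $E_s$.

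The second step is to construct an \emph{escape function} $W^r\in C^\infty(T^*M\setminus\{0\})$, homogeneous of degree $0$ for large $|\xi|$, satisfying
\[
W^r(x,\xi)\longrightarrow -r \text{ in a conic neighborhood of } E_u^*,\qquad W^r(x,\xi)\longrightarrow +r \text{ in a conic neighborhood of } E_s^*,
\]
together with $W^r\circ F^{-1}-W^r\le 0$ everywhere, with a strictly negative upper bound outside a bounded neighborhood of the zero section. Letting $\hat A_r$ be the pseudodifferential operator with symbol $\langle\xi\rangle^{W^r(x,\xi)}$ acting on sections of $L^{\otimes N}$, I would then define
\[
\mathcal{H}_N^r(P):=\hat A_r^{-1}\bigl(L^2(M,L^{\otimes N})\bigr),
\]
endowed with the transported Hilbert norm. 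The inclusions $C_N^\infty(P)\subset\mathcal{H}_N^r(P)\subset\mathcal{D}'_N(P)$ follow at once because $\hat A_r$ is a bijection on both $C^\infty$ and $\mathcal{D}'$.

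The core of the argument is the analysis of the conjugated operator $\widetilde F_r:=\hat A_r\hat F_N\hat A_r^{-1}$ on $L^2$. An Egorov-type calculation shows that $\widetilde F_r$ is a Fourier integral operator associated with the canonical transformation $F^{-1}$ whose principal amplitude is, up to lower-order terms,
\[
e^{V(x)}\cdot\langle\xi\rangle^{W^r(x,\xi)-W^r(F^{-1}(x,\xi))}.
\]
By the escape property and uniform hyperbolicity, this amplitude is bounded outside a compact subset of $T^*M$ by $\lambda^{-r}\max e^V+o(1)$. A standard decomposition $\widetilde F_r=K+R$, with $K$ of trace class (supported microlocally near the zero section) and $\|R\|_{L^2}\le\varepsilon_r+o(1)$, together with Nussbaum's formula for the essential spectral radius, then yields the claimed bound $r_{ess}(\hat F_N)\le\varepsilon_r$.

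To see that the discrete spectrum in $|z|\ge\varepsilon_r$ is intrinsic, I would compare $\mathcal{H}_N^r$ and $\mathcal{H}_N^{r'}$ for $r<r'$: on the common region $|z|>\varepsilon_r$, the Riesz projector associated with a small loop is a finite-rank operator whose range lies in $\bigcap_{s>0}\mathcal{H}_N^s$, and density of $C_N^\infty(P)$ combined with the agreement of the two resolvents identifies it canonically, so that the generalized eigenspaces depend only on $f$ (and $V$), not on $r$. The step I expect to be the most delicate is the construction of $W^r$: since the Anosov splitting is only Hölder continuous (Remark \ref{Remark:Holder_exp_beta}), one cannot use $E_u^*$ and $E_s^*$ as smooth data directly but must first thicken them to smooth conic neighborhoods in $T^*M$ and interpolate the exponent carefully, checking that the small loss from this smoothing is still compatible with the target rate $\lambda^{-r}\max e^V$.
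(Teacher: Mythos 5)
Your overall strategy — identifying $C_N^\infty(P)$ with $C^\infty(M,L^{\otimes N})$, building an escape function $W^r$ adapted to the dual hyperbolic splitting, defining $\mathcal{H}_N^r(P)$ via a pseudodifferential weight $\hat A_r$ with symbol $\langle\xi\rangle^{W^r}$, and then bounding the essential spectral radius of the conjugated operator $\hat A_r\hat F_N\hat A_r^{-1}=K+R$ by Nussbaum's formula — is exactly the semiclassical approach of \cite{fred-roy-sjostrand-07}, which is what this paper cites as the source of Theorem \ref{thm:Discrete-spectrum} (the paper itself does not reprove the fixed-$N$ statement; its own construction of $\mathcal{H}_N^r(P)$, via local charts of semiclassical size $\hbar^{1/2-\theta}$ and the Bargmann transform in Definition \ref{dfn:global_anisotropic_Sobolev_space}, is designed to give the uniform-in-$\hbar$ estimates needed for Theorem \ref{thm:band_structure}, not for Theorem \ref{thm:Discrete-spectrum} alone). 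You also correctly flag the only genuinely delicate point — smoothing the H\"older cone fields — and correctly observe that $W^r$ (hence $\mathcal{H}_N^r$) depends only on $f$ and not on $V$.

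However, your escape condition is stated in the wrong time direction and is internally inconsistent with your stated sign choice for $W^r$. The canonical map of $\hat F_N$ (which moves microsupports forward) is $F(x,\xi)=(f(x),{}^t(D_xf)^{-1}\xi)$, and with $W^r\to -r$ near $E_u^*$ (the cotangent direction expanded by $F$, annihilating $E_u$) and $W^r\to +r$ near $E_s^*$ (contracted by $F$), the order function \emph{decreases} along forward $F$-orbits: the correct escape condition is $W^r\circ F-W^r\le 0$, not $W^r\circ F^{-1}-W^r\le 0$. Under $F^{-1}$ a generic covector converges in direction to $E_s^*$, where $W^r=+r$, so $W^r\circ F^{-1}-W^r\ge 0$; demanding it be $\le 0$ with a strictly negative gap contradicts your cone normalizations and would destroy the bound $\|R\|\lesssim\lambda^{-r}\max e^V$. (Relatedly, the conjugated amplitude is more carefully $e^{V}\cdot a_r(F(x,\xi))/a_r(x,\xi)$ — one must track the change in $\langle\xi\rangle$ under $F$ as well as the change in exponent; writing it as $e^V\langle\xi\rangle^{W^r-W^r\circ F^{-1}}$ silently uses the same $\langle\xi\rangle$ at source and target, which is only correct in the cone interiors where $|\xi|$ scales by $\lambda^{\pm 1}$.) Fixing this direction restores the argument and reproduces the claimed $\varepsilon_r=\lambda^{-r}\max e^V$.
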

}}}\end{center}
\begin{rem}
See \cite{Baladi-Tsujii08},\cite[Cor. 1.3]{fred-roy-sjostrand-07}
for a general argument about this independence of $\mbox{Res}\left(\hat{F}_{N}\right)$
on the choice of $r$.
\end{rem}
\begin{figure}[h]
\begin{centering}
\scalebox{0.9}[0.9]{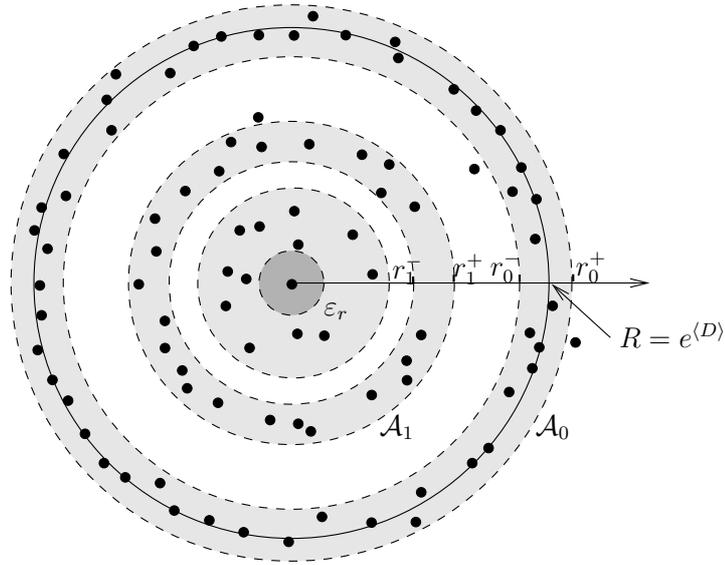}
\par\end{centering}

\caption{\label{fig:annuli}Theorem \ref{thm:Discrete-spectrum} shows that
the spectrum of \textbf{$\hat{F}_{N}$} has discrete eigenvalues called
Ruelle-Pollicott resonances. Theorem \ref{thm:band_structure} shows
that for $N$ large enough it is structured in bands and that the
resolvent is bounded between the bands, uniformly with respect to
$\hbar=1/\left(2\pi N\right)$. In the external band $\mathcal{A}_{0}$,
the number of resonances is given by the Weyl formula $\frac{1}{\left(2\pi\hbar\right)^{d}}\mathrm{Vol}\left(M\right)$
at leading order. The precise number is given by the Atiyah-Singer
formula in Theorem \ref{thm:Weyl-formula}. Theorem \ref{Thm:Distribution-of-resonances.}
shows that in this external band $\mathcal{A}_{0}$ almost all the
resonances are distributed uniformly on the circle of radius $R=e^{\left\langle D\right\rangle }$
in the limit $N\rightarrow\infty$. The spectral restriction of the
operator \textbf{$\hat{F}_{N}$} on this external band will be called
the \textbf{quantum operator }and denoted $\hat{\mathcal{F}}_{\hbar}$
in Definition \ref{def:quantum_operator}. The spectral projector
is $\Pi_{\hbar}$.}
\end{figure}

The main new result of this paper is the following Theorem . It is
illustrated in Figure \ref{fig:annuli}. Let us prepare some notations.
For a linear invertible map $L$ we will use the notation
\begin{equation}
\left\Vert L\right\Vert _{\mathrm{max}}:=\left\Vert L\right\Vert ,\qquad\left\Vert L\right\Vert _{\mathrm{min}}:=\left\Vert L^{-1}\right\Vert ^{-1}.\label{eq:def_of_max_min_norm}
\end{equation}

We define the special ``potential of reference'' 
\begin{equation}
V_{0}\left(x\right):=\frac{1}{2}\log\left|\det\, Df_{x}|_{E_{u}\left(x\right)}\right|\label{eq:def_Potential_V0}
\end{equation}
Notice that the unstable foliation $E_{u}\left(x\right)$ is not smooth
in $x$ in general (see Remark \ref{Remark:Holder_exp_beta}(1)) which
implies that this function $V_{0}$ is Hölder continuous but not smooth
in $x$. We then consider the difference 
\begin{equation}
D:=V-V_{0}\quad\in C^{\beta}\left(M\right)\label{eq:damping_function}
\end{equation}
which is also a Hölder continuous function on $M$ and that will be
called the ``\textbf{effective damping function}''. It will appear
in many results below. Finally we denote by 
\begin{equation}
D_{n}\left(x\right):=\sum_{j=1}^{n}D\left(f^{j}\left(x\right)\right)\label{eq:Birkhoff_Sum}
\end{equation}
 the Birkhoff sum of the damping function.

\begin{center}{\color{blue}\fbox{\color{black}\parbox{16cm}{
\begin{thm}
\textbf{\label{thm:band_structure}``Band structure of the spectrum
of $\hat{F}_{N}$''}. For any $\varepsilon>0$, there exists $C_{\varepsilon}>0$,
$N_{\varepsilon}\geq1$ such that for any $N\geq N_{\varepsilon}$\end{thm}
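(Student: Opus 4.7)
The plan is to treat $\hat{F}_N$ semiclassically with $\hbar = 1/(2\pi N)$, viewing it as a Fourier integral operator whose underlying canonical map is the lifted cotangent dynamics $\widetilde{F}=(Df^*)^{-1}$ on $T^*M$ equipped with the twisted symplectic form $\Omega = dx\wedge d\zeta + \pi^*\omega$. The first step is to identify the trapped set: since $f$ is Anosov, every non-zero covector escapes to infinity under iteration of $\widetilde{F}$, so the trapped set is the zero section $K\subset T^*M$. The geometric crux is that (i) $K$ is symplectic with respect to $\Omega$ (because of the $\pi^*\omega$ term), (ii) the normal bundle $N K \cong E_u^*\oplus E_s^*$ is hyperbolic for the linearised transverse dynamics, and (iii) the transverse and tangent directions are symplectically orthogonal near $K$. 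This is exactly the normally hyperbolic symplectic trapped set setting on which the band picture rests.

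Second, I would build a microlocal model of $\hat{F}_N$ near $K$ by a Darboux-type linearisation of $\widetilde{F}$ transverse to $K$: choose coordinates identifying a tubular neighbourhood of $K$ with $M\times E_u^*\times E_s^*$ so that $\widetilde F$ becomes, up to lower-order terms, the skew product $(x,\xi_u,\xi_s)\mapsto (f(x),\,{}^t\!(Df_x|_{E_u})^{-1}\xi_u,\,{}^t\!(Df_x|_{E_s})^{-1}\xi_s)$. Quantising the transverse directions via a Bargmann/wave-packet transform decomposes $\mathcal{H}^r_N$ microlocally, near $K$, into a Hilbert sum of ``transverse mode'' subspaces indexed by multi-indices $\alpha\in\mathbb{N}^{d}$ (creation-operator degrees along $E_u^*$). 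On the $\alpha$-th mode the principal symbol of $\hat F_N$ multiplies by a cocycle of modulus
\begin{equation*}
e^{V(x)}\,\bigl|\det Df_x|_{E_u}\bigr|^{-1/2}\,\prod_{j}\|Df_x|_{E_u}\|_\ast^{-\alpha_j}
= e^{D(x)}\cdot(\text{extra contraction for }|\alpha|\geq 1),
\end{equation*}
so after $n$ iterations its size is governed by $e^{D_n(x)}$ times a product of unstable eigenvalue cocycles raised to $-\alpha$. The band $\mathcal A_k$ corresponds to $|\alpha|=k$; the outermost $\mathcal A_0$ sits around $\exp\bigl(\tfrac1n D_n\bigr)$ which, by the Birkhoff ergodic theorem and uniform cocycle estimates, clusters in the limit about the circle $|z|=e^{\langle D\rangle}$.

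Third, to convert this model picture into a genuine spectral statement for $\hat F_N$ on $\mathcal H^r_N$, I would construct an escape function on $T^*M$ adapted to the splitting $TN K = E_u^*\oplus E_s^*$, decreasing along the flow of $\widetilde F$ away from $K$; this is what defines the anisotropic Sobolev space of Theorem~\ref{thm:Discrete-spectrum} and pushes $r_{\mathrm{ess}}$ below any prescribed radius. A positive-commutator / sharp G\aa rding argument on each transverse mode then yields upper and lower bounds on $\|\hat F_N^n u\|$ in terms of $e^{D_n}$ for each $\alpha$-block, and, crucially, bounds on the off-diagonal blocks that are smaller by a factor involving the minimal transverse contraction $\|Df|_{E_u}\|_{\min}^{-1}$. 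Iterating and applying the spectral radius formula $r(\hat F_N)=\lim\|\hat F_N^n\|^{1/n}$ to each block (and to the resolvent in annuli lying between blocks) gives: the spectrum in $|z|\geq \varepsilon_r$ is contained in the union of the annuli $\mathcal A_k=\{r_k^-\leq|z|\leq r_k^+\}$, and the resolvent is uniformly bounded (in $N$) on any compact region lying outside these annuli, provided a pinching condition on the expansion rates ensures $r_{k+1}^+<r_k^-$.

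The main obstacle will be the H\"older-only regularity of $E_u, E_s$: the linearisation step above is only $C^\beta$, so the model operator and the actual $\hat F_N$ differ by a term that is microlocally of order $\hbar^{\beta}$ rather than $\hbar$, and one has to show that this roughness is absorbed by the escape function without destroying the band separation. This is handled by using Lyapunov-style H\"older charts and by choosing the order function defining $\mathcal H^r_N$ to be itself only H\"older, with quantisation done via an FBI / Bargmann transform adapted to these charts (as in Faure--Roy--Sj\"ostrand); the price is that the band radii $r_k^\pm$ are only defined up to multiplicative errors $e^{\pm C_\varepsilon}$ and that the Birkhoff averages $\langle D\rangle$ appear only asymptotically as $N\to\infty$, which matches the $\varepsilon$/$N_\varepsilon$ quantifiers in the statement.
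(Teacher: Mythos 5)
Your proposal follows essentially the same approach as the paper: identify the trapped set $K$ (the zero section) as an $\Omega$-symplectic, normally hyperbolic invariant submanifold; pass to $\hbar$-dependent Darboux charts of size $\hbar^{1/2-\theta}$; use a Bargmann/FBI transform to split the transverse dynamics off as a tensor factor; grade the transverse factor by polynomial degree $k$, obtaining finite-rank projectors whose blocks carry norm $\approx e^{D_n}\|Df^n|_{E_u}\|^{-k}$; bound the off-block coupling by $O(\hbar^\epsilon)$ using the H\"older modulus of $E_u,E_s$; and finish with a resolvent/Neumann-series argument in the annular gaps. The one place where your wording diverges slightly is that you index the transverse modes by ``creation-operator degrees'' (Hermite/harmonic-oscillator eigenstates), whereas the paper uses Taylor-polynomial projectors $T^{(k)}$ precisely because these commute exactly with the transfer operator of the linearized map $L_A$, which makes the block-diagonal estimate clean; Lemma~\ref{lm:QTproj} shows the two gradings are equivalent, so this is a presentational choice rather than a gap.
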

\begin{enumerate}
\item the Ruelle-Pollicott resonances of $\hat{F}_{N}$ is contained in
a small neighborhood of the union of annuli $\left(\mathcal{A}_{k}:=\{r_{k}^{-}\leq\left|z\right|\leq r_{k}^{+}\}\right)_{k\geq0}$:
\begin{equation}
\mbox{Res}\left(\hat{F}_{N}\right)\subset\bigcup_{k\geq0}\underbrace{\left\{ r_{k}^{-}-\varepsilon\leq\left|z\right|\leq r_{k}^{+}+\varepsilon\right\} }_{\mbox{\ensuremath{\varepsilon}-neighborhood of }\mathcal{A}_{k}}\label{eq:spectrum_in_annuli}
\end{equation}
with 
\begin{eqnarray}
r_{k}^{-} & := & \liminf_{n\rightarrow\infty}\inf_{x\in M}\left(e^{\frac{1}{n}D_{n}\left(x\right)}\left\Vert Df_{x}^{n}|_{E_{u}}\right\Vert _{\mathrm{max}}^{-k/n}\right),\label{eq:def_r+_r-}\\
r_{k}^{+} & := & \limsup_{n\rightarrow\infty}\sup_{x\in M}\left(e^{\frac{1}{n}D_{n}\left(x\right)}\left\Vert Df_{x}^{n}|_{E_{u}}\right\Vert _{\mathrm{min}}^{-k/n}\right)\nonumber 
\end{eqnarray}

\item Suppose that $r_{k}^{+}<r_{k-1}^{-}$ for some $k\geq1$. For any
$z\in\mathbb{C}$ such that $r_{k}^{+}+\varepsilon<\left|z\right|<r_{k-1}^{-}-\varepsilon$,
i.e. such that $z$ is in a ``gap'', the resolvent of $\hat{F}_{N}$
on $\mathcal{H}_{N}^{r}\left(P\right)$ is controlled uniformly with
respect to $N$:
\begin{equation}
\left\Vert \left(z-\hat{F}_{N}\right)^{-1}\right\Vert \leq C_{\varepsilon}\label{eq:bound_resolvent}
\end{equation}
This is true also for $\left|z\right|>r_{0}^{+}+\varepsilon$.
\item If $r_{1}^{+}<r_{0}^{-}$, i.e. if the outmost annulus $\mathcal{A}_{0}$
is\textbf{ }isolated from other annuli, then the number of resonances
in its neighborhood satisfies the estimate called ``\textbf{Weyl
formula}'' 
\begin{equation}
\sharp\left\{ \mbox{Res}\left(\hat{F}_{N}\right)\bigcap\left\{ r_{0}^{-}-\varepsilon\leq\left|z\right|\leq r_{0}^{+}+\varepsilon\right\} \right\} =N^{d}\mathrm{Vol}_{\omega}\left(M\right)\left(1+O\left(N^{-\delta}\right)\right).\label{eq:weyl_law}
\end{equation}
with $\mbox{Vol}_{\omega}\left(M\right):=\int_{M}\frac{1}{d!}\omega^{\wedge d}$
being the symplectic volume of $M$ and $\delta>0$.
\end{enumerate}
}}}\end{center}

The proof of Theorem \ref{thm:band_structure} is given in Section
\ref{sec:Proofs-of-the_mains}. There in Theorem \ref{thm:More-detailled-description_of_spectrum},
we will provide a more detailed but more technical result about the
operators $\hat{F}_{N}$, without any assumption on existence of gaps
$r_{k}^{+}<r_{k-1}^{-}$. The Weyl law (\ref{eq:weyl_law}) is obtained
in Corollary \ref{cor:Weyl_law}. We will give a more precise value
in Theorem \ref{thm:Weyl-formula} below.
\begin{rem}
\label{remarks_1.17}~
\begin{enumerate}
\item Since $\left\Vert Df_{x}^{n}|_{E_{u}}\right\Vert _{\mathrm{max}}^{1/n}\geq\left\Vert Df_{x}^{n}|_{E_{u}}\right\Vert _{\mathrm{min}}^{1/n}>\lambda>1$,
from (\ref{eq:def_dynamics}), we have obviously $r_{k+1}^{-}<r_{k}^{-}$
and $r_{k+1}^{+}<r_{k}^{+}$ for every $k\geq0$. However we don't
always have $r_{k+1}^{+}<r_{k}^{-}$ therefore the annuli $\mathcal{A}_{k}$
may intersect each other.
\item From Theorem \ref{thm:band_structure} it is tempting to take the
potential $V=V_{0}$ defined in (\ref{eq:def_Potential_V0}) which
would indeed give $D=0$ hence $r_{0}^{+}=r_{0}^{-}=1$ in (\ref{eq:def_r+_r-}).
In that case the external band $\mathcal{A}_{0}$ would be the unit
circle, separated from the internal band $\mathcal{A}_{1}$ by a spectral
gap $r_{1}^{+}$ given by
\[
r_{1}^{+}=\limsup_{n\rightarrow\infty}\sup_{x\in M}\left(\left\Vert Df_{x}^{n}|_{E_{u}}\right\Vert _{\mathrm{min}}^{-1/n}\right)<\frac{1}{\lambda}<1
\]
However Theorem \ref{thm:band_structure} does not apply in this case
because the function $V_{0}$ is not smooth in $x$ as required. This
is the purpose of the next Section \ref{sub:Spectral-results-with-Grassman}
to show how to handle this non smooth potential $V_{0}$ and still
get spectral results similar to Theorem \ref{thm:band_structure}.
For the moment let us remark that if $\widetilde{E}_{u}\subset TM$
is a smooth approximation of the unstable sub-bundle $E_{u}\subset TM$
in $C^{0}$ norm and if one chooses the potential: 
\begin{equation}
V_{0}\left(x\right)=\frac{1}{2}\log\left|\det\, Df_{x}|_{\tilde{E}_{u}\left(x\right)}\right|\label{eq:choice_V_smooth}
\end{equation}
then on can have $r_{0}^{-},r_{0}^{+}$ (arbitrarily) closed to one
and the annulus $\mathcal{A}_{0}$ of the external band get isolated
from the other ones ($\mathcal{A}_{0}\bigcap\mathcal{A}_{k}=\emptyset$
for $k\neq0$).
\item In the simple case of the linear hyperbolic map on the torus $\mathbb{T}^{2}$
in (\ref{eq:f0_cat_map}) with $V\left(x\right)=0$, we have $r_{k}^{+}=r_{k}^{-}=\lambda^{-k-\frac{1}{2}}$
with $\lambda=Df_{0/E_{u}}=\frac{3+\sqrt{5}}{2}\simeq2.6$ (constant),
and each annulus $\mathcal{A}_{k}$ is a circle. In this case Theorem
\ref{thm:band_structure} has been obtained in \cite{fred-PreQ-06}
and is depicted on Figure (1-b) in \cite{fred-PreQ-06}. If one chooses
$V\left(x\right)=\frac{1}{2}\log\left|\mbox{det}Df_{x}|_{E_{u}}\right|=\frac{1}{2}\log\lambda$
the external band $\mathcal{A}_{0}$ is the unit circle and it is
shown in \cite{fred-PreQ-06} that the Ruelle-Pollicott resonances
on the external band coincide with the spectrum of the quantized map
called the ``quantum cat map''.
\item The estimate (\ref{eq:bound_resolvent}) will be useful in Section
\ref{sub:Emergence-of-quantum} to express dynamical correlation functions.
\item For a given $\varepsilon>0$, in (\ref{eq:spectrum_in_annuli}), only
a finite number of annuli $\mathcal{A}_{k}$ can be distinguished. 
\end{enumerate}
\end{rem}
From now on, we suppose that $r_{1}^{+}<r_{0}^{-}$, i.e. that the
external annulus $\mathcal{A}_{0}$ defined in (\ref{eq:spectrum_in_annuli})
is isolated from other annuli $\bigcup_{k\geq1}\mathcal{A}_{k}$.
We have seen in (\ref{eq:choice_V_smooth}), how to achieve this situation
by a suitable choice of the potential $V\left(x\right)$.

\begin{center}{\color{red}\fbox{\color{black}\parbox{16cm}{
\begin{defn}
\label{def:quantum_operator}Assume $r_{1}^{+}<r_{0}^{-}$. For $N=1/\left(2\pi\hbar\right)$
large enough let
\begin{equation}
\Pi_{\hbar}:\mathcal{H}_{N}^{r}\left(P\right)\rightarrow\mathcal{H}_{N}^{r}\left(P\right)\label{eq:def_Pi_hbar}
\end{equation}
be the spectral projector of the operator $\hat{F}_{N}$ on its external
band, i.e. on the spectral domain $\left\{ r_{0}^{-}-\varepsilon\leq\left|z\right|\leq r_{0}^{+}+\varepsilon\right\} $.
Let
\[
\mathcal{H}_{\hbar}:=\mathrm{Im}\left(\Pi_{\hbar}\right)
\]
that we call the \textbf{``quantum space''} which is finite dimensional
from (\ref{eq:weyl_law}) and let
\begin{equation}
\hat{\mathcal{F}}_{\hbar}:\mathcal{H}_{\hbar}\rightarrow\mathcal{H}_{\hbar}\label{eq:def_quantum_op}
\end{equation}
be the finite dimensional spectral restriction of $\hat{F}_{N}$ on
the exterior annulus $\mathcal{A}_{0}$. We call $\hat{\mathcal{F}}_{\hbar}$
the \textbf{``quantum operator''.}
\end{defn}
}}}\end{center}
\begin{rem}
We will justify in Section \ref{sub:Emergence-of-quantum} this name
of ``quantum operator''.
\end{rem}
In (\ref{eq:weyl_law}), Weyl law gives the leading order for the
value of $\dim\mathcal{H}_{\hbar}$ . The next Theorem gives its exact
value.

\begin{center}{\color{blue}\fbox{\color{black}\parbox{16cm}{
\begin{thm}
\textbf{\label{thm:Weyl-formula}``Index formula for the number of
resonances''}. If the external annulus $\mathcal{A}_{0}$ is isolated,
i.e. $r_{1}^{+}<r_{0}^{-}$, then the number of resonances in the
external annulus $\mathcal{A}_{0}$ is given by the \emph{Atiyah-Singer
index formula:} for $N$ large enough, 
\begin{equation}
\dim\mathcal{H}_{\hbar}=\int_{M}\left[e^{N\omega}\mbox{Todd}\left(TM\right)\right]_{2d}\label{eq:Atiyah-Singer}
\end{equation}
where 
\[
e^{N\omega}=1+N\omega+\ldots+\frac{N^{d}\omega^{d}}{d!}
\]
 is the Chern character and 
\[
\mbox{Todd}\left(TM\right)=\mbox{det}\left(\frac{\Omega\left(TM\right)}{1-\exp\left(-\Omega\left(TM\right)\right)}\right)=1+\frac{\Omega\left(TM\right)}{2}+\ldots\in H_{DR}^{\bullet}\left(M\right)
\]
 is the Todd class of the tangent bundle defined from the Riemannian
curvature $\Omega\left(TM\right)$ and $\left[.\right]_{2d}$ denotes
the restriction to volume $2d$-forms. Consequently we recover ``Weyl
formula'' of (\ref{eq:weyl_law}) and the remainder is also better:
\begin{equation}
\dim\mathcal{H}_{\hbar}=N^{d}\mbox{Vol}_{\omega}\left(M\right)+\mathcal{O}\left(N^{d-1}\right)\label{eq:number_of_resonances_Weyl_formula}
\end{equation}

\end{thm}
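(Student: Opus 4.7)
The plan is to identify $\dim \mathcal{H}_{\hbar}$ with an Atiyah-Singer index and then apply Hirzebruch-Riemann-Roch together with Kodaira vanishing. The starting point is the semiclassical description of the prequantum operator $\hat{F}_{N}$ that underlies Theorem \ref{thm:band_structure}: the canonical relation of $\hat{F}$ on $T^{*}P$ has, after Fourier reduction to the $N$-th mode, a trapped set equal to the zero section $K\subset T^{*}M$, which is symplectic with respect to $\Omega=dx\wedge d\zeta+\pi^{*}\omega$. Since $K$ is normally hyperbolic and symplectic, the transverse symplectic bundle $N K\to M$ carries a well-defined notion of transverse ``ground state'', and the external band $\mathcal{A}_{0}$ corresponds exactly to this ground state.

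The first step is to promote this picture to an explicit model. Choosing a smooth almost complex structure $J$ on $M$ compatible with $\omega$, one splits the transverse directions into holomorphic/anti-holomorphic parts; in Bargmann coordinates adapted to the fibers of $N K$, the contribution of $\hat{F}_{N}$ transverse to $K$ becomes a weighted Bargmann-Fock shift operator whose spectrum consists of geometrically decreasing bands and whose leading band is one-dimensional in each fiber, given by the holomorphic Gaussian ground state. Summing these transverse ground states over $x\in M$ realizes $\mathcal{H}_{\hbar}$, via the projector $\Pi_{\hbar}$, as a finite-dimensional subspace of $\mathcal{H}_{N}^{r}(P)$ that is microlocally isomorphic to the space of smooth sections of the associated holomorphic line bundle, i.e.\ to $C^{\infty}(M,L^{\otimes N})$, with $\Pi_{\hbar}$ intertwining the anti-holomorphic Dolbeault operator $\bar{\partial}_{L^{\otimes N}}$ up to $\mathcal{O}(\hbar^{\infty})$ errors.

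The second step is the index computation. Once $\Pi_{\hbar}$ has been identified with (a perturbation of) the orthogonal projector onto $\ker\bar{\partial}_{L^{\otimes N}}$, one concludes that
\[
\dim\mathcal{H}_{\hbar}=\dim\ker\bar{\partial}_{L^{\otimes N}}=\dim H^{0}(M,L^{\otimes N}).
\]
Because $L$ has first Chern class $[\omega]$ and is therefore positive, the Kodaira vanishing theorem gives $H^{q}(M,L^{\otimes N})=0$ for $q\geq 1$ and $N$ large enough, so the dimension equals the holomorphic Euler characteristic $\chi(M,L^{\otimes N})$. By Hirzebruch-Riemann-Roch,
\[
\chi(M,L^{\otimes N})=\int_{M}\mathrm{ch}(L^{\otimes N})\,\mathrm{Todd}(TM)=\int_{M}\bigl[e^{N\omega}\,\mathrm{Todd}(TM)\bigr]_{2d},
\]
which is the claimed formula (\ref{eq:Atiyah-Singer}). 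The Weyl asymptotics (\ref{eq:number_of_resonances_Weyl_formula}) then follows by expanding $e^{N\omega}=\sum_{k}N^{k}\omega^{k}/k!$ and keeping the top degree term $N^{d}\omega^{d}/d!$, whose integral is $N^{d}\mathrm{Vol}_{\omega}(M)$; the subleading terms contribute at most $\mathcal{O}(N^{d-1})$.

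The main obstacle is the first step: rigorously identifying $\mathcal{H}_{\hbar}$, which is defined purely dynamically as the range of the spectral projector $\Pi_{\hbar}$ on an anisotropic Sobolev space $\mathcal{H}_{N}^{r}(P)$, with the geometric object $H^{0}(M,L^{\otimes N})$. This requires constructing adapted Bargmann charts along the zero section $K$, showing that the transverse operator on each fiber is uniformly (in $x\in M$) conjugate to a rescaled Bargmann-Fock shift whose spectrum separates into geometrically spaced bands, proving that the external ground-state projector on each fiber depends smoothly on $x$ (despite $E_{u}$, $E_{s}$ being only H\"older), and globalizing these local models by a partition of unity with uniform error control in $\hbar$. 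Once this ``transverse Toeplitz'' structure is established, the reduction to Riemann-Roch is essentially formal.
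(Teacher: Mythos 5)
There is a genuine gap, and it centers on your appeal to Kodaira vanishing and Hirzebruch--Riemann--Roch. Those theorems require the almost complex structure $J$ compatible with $\omega$ to be \emph{integrable}, so that $L^{\otimes N}$ carries a holomorphic structure and $\bar\partial_{L^{\otimes N}}^{2}=0$. The paper explicitly remarks (Subsection \ref{sub:Compatible-metrics-and}) that $J$ is in general \emph{not} integrable on a symplectic Anosov manifold, and no additional hypothesis is imposed on $M$ to guarantee a complex structure. Without integrability there is no Dolbeault complex, no cohomology groups $H^{q}(M,L^{\otimes N})$ to vanish, and the step $\dim\mathcal{H}_{\hbar}=\dim H^{0}(M,L^{\otimes N})=\chi(M,L^{\otimes N})$ is ill-posed. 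The formula of the statement happens to coincide with what Riemann--Roch would give in the K\"ahler case, but that coincidence is precisely the content of the hard geometric theorem being cited, not something one can short-circuit.

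The paper's actual route avoids this entirely. It introduces the rough Laplacian $\Delta=D^{*}D$ attached to the connection and a compatible metric, and proves in Theorem \ref{thm:band_structure-of_Laplacian} that the low spectrum of $\frac{1}{2\pi N}\Delta_{N}$ clusters into Landau levels near $d+2k$; the number of eigenvalues in the lowest band is given by the index formula, by results of Guillemin and Ma--Marinescu that are valid for a general symplectic manifold with compatible almost complex structure (this is a $\mathrm{spin}^{c}$-type index, not holomorphic Euler characteristic). The identification $\dim\mathcal{H}_{\hbar}=\operatorname{rank}\mathfrak{P}_{0}$ is then obtained not by showing $\Pi_{\hbar}$ is a norm perturbation of $\mathfrak{P}_{0}$ (the two ranges are different kinds of objects: one sits in the anisotropic Sobolev space as distributions, the other in $C_{N}^{\infty}(P)$ as smooth sections) but by exhibiting a finite-rank isomorphism $\mathfrak{P}_{0}:\operatorname{Im}\tau^{(0)}\to\operatorname{Im}\mathfrak{P}_{0}$. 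This is made possible because \emph{both} spectral projectors are constructed from the same microlocal decomposition: the transfer operator's external-band projector $t_{\hbar}^{(0)}$ comes from the Taylor projector $T^{(0)}$ while the Laplacian's lowest-band projector $q_{\hbar}^{(0)}$ comes from the harmonic-oscillator projector $Q_{\hbar}^{(0)}$, and Lemma \ref{lm:QTproj} shows that $Q_{\hbar}^{(0)}$ and $T^{(0)}$ restrict to mutually inverse (up to bounded operators) bijections between their images. So the structure of your Step 1 is in the right spirit, but the geometric input in Step 2 needs to be the almost-complex index theorem behind the rough-Laplacian Landau levels, not classical Kodaira + HRR, and the comparison of projectors needs to be a rank comparison via the common Bargmann-space decomposition, not a small-norm perturbation claim.
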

}}}\end{center}

Theorem \ref{thm:Weyl-formula} above follows from Theorem \ref{thm:band_structure-of_Laplacian}
where we will introduce a differential operator $\Delta=D^{*}D$ acting
in $C_{N}^{\infty}\left(P\right)$ called the \emph{rough Laplacian}.
In Theorem \ref{thm:band_structure-of_Laplacian}, we will show that
its low energy spectrum has band spectrum and that the cardinality
of the eigenvalues in the first (i.e. the lowest) band equals the
quantity on the right hand side of the formula (\ref{eq:Atiyah-Singer}).
The latter is actually a consequence of a theorem in geometry. We
will also show that the rank of the projector $\Pi_{0}$ coincides
with the rank of the spectral projector for eigenvalues in the first
band. We thus obtain the formula (\ref{eq:Atiyah-Singer}). Then the
Weyl formula (\ref{eq:number_of_resonances_Weyl_formula}) is a direct
consequence. Indeed we have $\left[e^{N\omega}\mbox{Todd}\left(TM\right)\right]_{2d}=\frac{N^{d}\omega^{d}}{d!}+O\left(N^{d-1}\right)$
and hence 
\[
\int_{M}\left[e^{N\omega}\mbox{Todd}\left(TM\right)\right]_{2d}=N^{d}\mbox{Vol}_{\omega}\left(M\right)+\mathcal{O}\left(N^{d-1}\right).
\]

\begin{rem}
In the case of $M=\mathbb{T}^{2}$ which correspond to example (\ref{eq:f0_cat_map})
and treated in \cite{fred-PreQ-06}, the projector $\Pi_{0}$ has
exactly $\mbox{rank}\left(\Pi_{0}\right)=N$. Indeed, for Riemann
surfaces $M$ of genus $g$, we have $\mbox{Todd}\left(TM\right)=1+\frac{c_{1}\left(TM\right)}{2}$
with first Chern number $\int_{M}c_{1}\left(TM\right)=2-2g$ (the
Gauss-Bonnet integral formula). Hence $\mbox{rank}\left(\Pi_{0}\right)=\int_{M}\left(N\omega\right)+\int_{M}c_{1}\left(TM\right)=N$
for $M=\mathbb{T}^{2}$ with genus $g=1$.
\end{rem}
For the next Theorem, recall the definition of the damping function
$D\left(x\right)=V\left(x\right)-V_{0}\left(x\right)$ in (\ref{eq:damping_function}).

\begin{center}{\color{blue}\fbox{\color{black}\parbox{16cm}{
\begin{thm}
\textbf{\label{Thm:Distribution-of-resonances.}``Distribution of
resonances''}. Assume $r_{1}^{+}<r_{0}^{-}$. In the limit $\hbar\rightarrow0$,
most of eigenvalues of $\hat{\mathcal{F}}_{\hbar}$ \textbf{concentrate
and equidistribute} on the circle of radius
\begin{equation}
R:=e^{\left\langle D\right\rangle },\quad\mbox{with }\left\langle D\right\rangle :=\frac{1}{\mbox{Vol}_{\omega}\left(M\right)}\int_{M}D\left(x\right)dx\label{eq:<R>}
\end{equation}
(See Figure \ref{fig:annuli}). More precisely, for any $\varepsilon>0$,
we have
\[
\lim_{N\to\infty}\frac{\sharp\left\{ \mathrm{Res}(\hat{F}_{N})\cap\{||z|-R|<\epsilon\}\right\} }{\sharp\{\mathrm{Res}(\hat{F}_{N})\}}=1
\]
and for any $0\le\theta_{1}<\theta_{2}\le2\pi$,
\[
\lim_{N\to\infty}\frac{\sharp\left\{ \mathrm{Res}(\hat{F}_{N})\cap\{\theta_{1}<\arg(z)<\theta_{2}\}\right\} }{\sharp\{\mathrm{Res}(\hat{F}_{N})\}}=\frac{\theta_{2}-\theta_{1}}{2\pi}.
\]
 
\end{thm}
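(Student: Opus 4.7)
The plan is to establish both statements at once by proving weak convergence of the normalized spectral counting measure. Let $K_\hbar:=\dim\mathcal{H}_\hbar\sim N^d\mathrm{Vol}_\omega(M)$ (by Theorem \ref{thm:Weyl-formula}) and define the probability measure
\[
\nu_\hbar:=\frac{1}{K_\hbar}\sum_{i=1}^{K_\hbar}\delta_{\lambda_i/R}\quad\text{on }\mathbb{C}.
\]
The two limits in the theorem say precisely that $\nu_\hbar$ converges weakly to the normalized Lebesgue measure on the unit circle $\{|z|=1\}$. Since, by Theorem \ref{thm:band_structure}, the supports of the $\nu_\hbar$ are uniformly contained in the compact annulus $\{(r_0^--\varepsilon)/R\le|z|\le(r_0^++\varepsilon)/R\}$, it suffices to check convergence of a separating family of continuous test functions, for which polynomial moments will do.

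First I would handle \emph{angular} equidistribution. For each fixed integer $n\ge1$, I claim
\[
\frac{1}{K_\hbar}\sum_i\lambda_i^n\;=\;\frac{\mathrm{tr}(\hat{\mathcal{F}}_\hbar^n)}{K_\hbar}\;\xrightarrow[\hbar\to0]{}\;0.
\]
The numerator is controlled by the Gutzwiller trace formula with exponentially small remainder announced in Section \ref{sub:1.5Gutzwiller-trace-formula}: for fixed $n$ it equals a finite (and $\hbar$-independent) sum over periodic points of $f$ of period dividing $n$, each term of size $O(1)$ as $\hbar\to0$, plus an error of size $O(e^{-cn})$. Dividing by the Weyl-size denominator $K_\hbar\asymp N^d$ forces the limit to vanish. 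By conjugation the same holds for moments in $\bar z$.

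Next comes the \emph{radial concentration}, which is the genuinely new ingredient and, in my view, the main obstacle. The idea is to use the semiclassical calculus of Section \ref{sub:1.7Semiclassical-calculus-on}: $\hat{\mathcal{F}}_\hbar$ is a Fourier integral operator whose principal symbol has modulus $e^{V-V_0}=e^{D}$ on the trapped set $K\subset T^*M$. Iterating the exact Egorov formula I expect
\[
(\hat{\mathcal{F}}_\hbar^*)^n\hat{\mathcal{F}}_\hbar^n\;\sim\;\mathrm{Op}_\hbar\!\bigl(e^{2D_n}\bigr)
\]
up to lower-order corrections, so that by the semiclassical Weyl law $\mathrm{tr}\,\mathrm{Op}_\hbar(\phi)=K_\hbar\cdot\langle\phi\rangle+o(N^d)$ for smooth $\phi$,
\[
\frac{1}{K_\hbar}\mathrm{tr}\bigl((\hat{\mathcal{F}}_\hbar^*)^n\hat{\mathcal{F}}_\hbar^n\bigr)\;\approx\;\frac{1}{\mathrm{Vol}_\omega(M)}\int_M e^{2D_n}\,d\mathrm{vol}.
\]
Ergodicity of the symplectic Anosov map $f$ with respect to the Liouville volume makes $\tfrac{1}{n}D_n(x)\to\langle D\rangle$ almost everywhere; suitably post-processing these traces (via a Jensen/logarithmic-potential argument relating $\sum\log|\lambda_i|$ to $\log|\det\hat{\mathcal{F}}_\hbar|$ and the Fredholm determinant computed from the Gutzwiller sum) upgrades singular-value information to concentration of eigenvalue moduli at $e^{\langle D\rangle}=R$. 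The delicate point here is the non-selfadjointness of $\hat{\mathcal{F}}_\hbar$ and the merely H\"older regularity of $D$, which forces careful bookkeeping of remainders and, in the $n\to\infty$ limit needed to extract $\langle D\rangle$ from the Birkhoff sums, uniformity in $n$ of the semiclassical error.

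Once radial concentration is established, $\nu_\hbar$ is asymptotically supported on the unit circle, where $\bar z=z^{-1}$; hence polynomials in $z$ alone separate limiting measures. The vanishing moments from Step 1 then force the limiting measure to be the normalized Haar measure on $\{|z|=1\}$, proving both claimed limits simultaneously. The hardest ingredient is, as noted, the bridge from symbol-level (singular-value) control provided by the FIO calculus and Egorov to eigenvalue-level concentration, which is where ergodicity of $f$ and the band-gap resolvent bound \eqref{eq:bound_resolvent} must be combined quantitatively.
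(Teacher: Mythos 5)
Your treatment of the \emph{angular} equidistribution is essentially the paper's: you express the moments $\sum_j \lambda_j^n/K_\hbar = \mathrm{Tr}(\hat{\mathcal{F}}_\hbar^n)/K_\hbar$ and argue these go to $0$ because the Gutzwiller sum over period-$n$ orbits is $O(1)$ in $\hbar$; the paper also splits $e^{in\theta_j} = (1-\rho_j/R)e^{in\theta_j} + (\rho_j/R)e^{in\theta_j}$ and uses the already-established radial concentration for the first piece and the trace formula for the second. Both versions therefore need the radial statement first, and there the approaches diverge in a way that matters.

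For the radial concentration your proposed route is to build $(\hat{\mathcal{F}}_\hbar^*)^n\hat{\mathcal{F}}_\hbar^n \sim \mathrm{Op}_\hbar(e^{2D_n})$, invoke Weyl's law and ergodicity, and then upgrade singular-value information to eigenvalue information by a logarithmic-potential argument. This has a genuine gap: the operator $\hat{\mathcal{F}}_\hbar$ is not normal, the gap between $\prod|\lambda_j|$ and $\prod s_j$ is precisely the hard content of non-selfadjoint spectral distribution, and nothing in your sketch quantifies it. Moreover you would need the Egorov-plus-Weyl approximation to hold \emph{uniformly} in $n$ as $n\to\infty$ (to turn Birkhoff averages into the constant $\langle D\rangle$), which does not follow from the exact one-step Egorov formula as stated; iterating it $n$ times accumulates the error from the approximate composition formula, and the paper never needs such uniformity.

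The paper avoids both problems by a key ingredient you have not identified: the conjugation identity $\hat{F}_V = e^{G_n}\circ \mathcal{L}_{D_n}\circ e^{-G_n}$ with $\mathcal{L}_{D_n} = e^{D_n}\hat{F}_{V_\epsilon}$, where $D_n$ is the time-averaged (Birkhoff) damping. Since conjugation preserves spectrum, this replaces the original operator by one with potential $D_n$, which by ergodicity is $L^1$-close to the constant $\langle D\rangle$. The crucial observation is then at the level of \emph{operators}, not symbols: writing $\mathcal{L}_{D_n,N} = T_{n,N} + S_{n,N}$ with $S_{n,N} = \tau_N^{(0)}\big((e^{D_n}-e^{\langle D\rangle})\hat{F}_{V_\epsilon,N}\big)\tau_N^{(0)}$, the $L^1$-smallness of $e^{D_n}-e^{\langle D\rangle}$ plus the trace estimate (\ref{eq:trace_estimate_on_tauMphi}) give $\|S_{n,N}\|_{\mathrm{Tr}} \le \epsilon' N^d$. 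Since $T_{n,N}$ has its spectrum cleanly in the annulus around $R$ with uniformly bounded resolvent outside, a trace-norm perturbation argument (either Weyl's inequality, or Jensen's formula applied to the \emph{relative} Fredholm determinant $\det(1 - (z-T_{n,N})^{-1}S_{n,N})$) bounds the number of escaping eigenvalues by $C\cdot\epsilon' N^d$, which is $o(K_\hbar)$. Your ``Jensen/logarithmic-potential argument'' gestures at one of the paper's two proofs of this last step, but applied to $\det \hat{\mathcal{F}}_\hbar$ directly rather than to the relative determinant, and without the conjugation that makes the perturbation trace-norm small, it would not close.

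Finally, a technical point you should flag: since $V_0$ is only H\"older, $D = V - V_0$ is not smooth, so neither your $\mathrm{Op}_\hbar(e^{2D_n})$ nor the paper's $\mathcal{M}(e^{D_n})$ has a smooth symbol. The paper handles this by approximating $V_0$ by a smooth $V_\epsilon$ and working with $D := V - V_\epsilon$; your sketch would need the same preliminary reduction.
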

}}}\end{center}

The proof of Theorem \ref{Thm:Distribution-of-resonances.} will be
given in Section \ref{sec:9} .
\begin{rem}
With some pinching conditions, it is possible to show that the resonances
in the internals bands $\mathcal{A}_{k}$ also concentrate uniformly
on circles.
\end{rem}
~
\begin{rem}
The proof of Theorem \ref{Thm:Distribution-of-resonances.} uses ergodicity
of the map $f:M\rightarrow M$ and follows a techniques presented
by J. Sjöstrand in \cite{sjostrand_2000} for the damped wave equation.
Using mixing and large deviations properties of the map $f$ it may
be possible to improve the results as in \cite{nalini_anantharam_2010}.\end{rem}

\subsection{\label{sub:Spectral-results-with-Grassman}Spectral results with
extended models on the Grassmanian bundle}

In this Section we extend the previous results for a family of prequantum
transfer operators more general than that considered in Theorem \ref{thm:band_structure}
in the sense that we will admit some functions $V$ for the potential,
defined in (\ref{eq:def_V}) below that may be only Hölder continuous.
This is the case of $V_{0}$ given in (\ref{eq:def_Potential_V0}).
The trick is to consider transfer operators associated to the dynamics
of $f:M\rightarrow M$ lifted on the $d$-dimensional Grassmanian
bundle $p:G_{d}\left(TM\right)\rightarrow M$ so that the potential
function $V$ on $M$ is derived from a smooth potential function
$\tilde{V}$ on $G_{d}\left(TM\right)$. We explain now this construction.

\subsubsection{The Grassmanian bundle $G\rightarrow M$ and the lifted map $f_{G}$}

At a given point $x\in M$ of the manifold $M$, recall from Remark
\ref{Remark:Holder_exp_beta}(1) that the stable and unstable linear
space $E_{s}\left(x\right)$, $E_{u}\left(x\right)\subset T_{x}M$,
$x\in M$ have dimension $d=\dim M/2$ each. For that reason we consider
all the $d$-dimensional linear subspaces of $T_{x}M$:
\begin{defn}
At a given point $x\in M$, the $d$-dimensional linear subspaces
of $T_{x}M$ form a compact manifold of dimension $d^{2}$ called
the \textbf{Grassmanian}%
\footnote{For $x\in M$, the Grassmanian is a homogeneous space
\[
G_{x}=G_{d}\left(T_{x}M\right)\equiv\frac{O\left(2d\right)}{O\left(d\right)\times O\left(d\right)},\quad\mathrm{dim}G_{x}=d^{2}.
\]
} $G_{d}\left(T_{x}M\right)$. The \textbf{Grassmanian bundle} is the
bundle
\begin{equation}
G_{d}\left(TM\right)\overset{p}{\longrightarrow}M\label{eq:def_Gd}
\end{equation}
whose base space is $M$ and the fiber over point $x\in M$ is the
Grassmanian $G_{d}\left(T_{x}M\right)$. For simplicity we will denote
$p:G\rightarrow M$ this bundle and $G_{x}:=G_{d}\left(T_{x}M\right)$
the fiber.
\end{defn}
~
\begin{defn}
The diffeomorphism $f:M\rightarrow M$ induces a natural lifted map
\begin{equation}
f_{G}=Df:\quad G\rightarrow G\label{eq:def_fG}
\end{equation}
 where (abusively) $Df_{x}$ denotes the induced action of the differential
$Df_{x}$ on linear subspaces $l\in G_{x}$. By definition we have
a commutative diagram:
\[
\begin{array}{ccc}
G & \overset{f_{G}}{\longrightarrow} & G\\
\downarrow p &  & \downarrow p\\
M & \overset{f}{\longrightarrow} & M
\end{array}
\]
\end{defn}
\begin{rem}
The stable $d$-dimensional bundle $E_{s}\rightarrow M$ defines a
Hölder continuous section $E_{s}$ of the bundle $G\rightarrow M$
and from Definition \ref{def:Anosov_diffeo}, $E_{s}$ is a repeller
for the map $f_{G}$. Similarly the unstable bundle $E_{u}$ is a
Hölder continuous section of the bundle $G\rightarrow M$. The image
of $E_{u}$ (for which we also write $E_{u}$) is an attractor for
the map $f_{G}$. See Figure \ref{fig:bundle_G}.
\end{rem}
\begin{figure}
\centering{}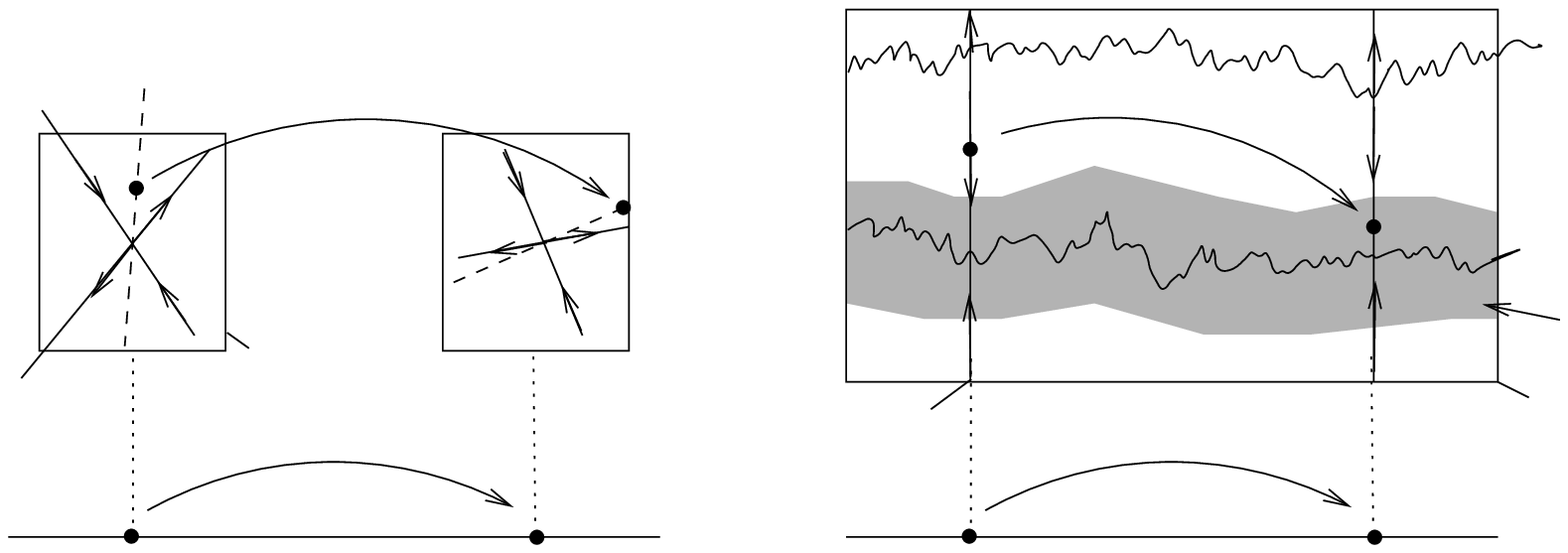\caption{\label{fig:bundle_G}}
\end{figure}

\subsubsection{Prequantum transfer operator $\tilde{F}_{N}$ on $P_{G}\rightarrow G$}

Let
\[
\tilde{V}\in C^{\infty}\left(G\right)
\]
be a smooth real valued function called \textbf{potential function}
and its restriction to the unstable bundle will be denoted by:
\begin{equation}
V:=\tilde{V}\circ E_{u}\in C^{\beta}\left(M\right)\label{eq:def_V}
\end{equation}
which is Hölder continuous with Hölder exponent $0<\beta\leq1$.
\begin{rem}
~
\begin{enumerate}
\item In Definition \ref{def:Prequant_op} we have considered the particular
case of a smooth function $V\in C^{\infty}\left(M\right)$ which can
be derived here from the function $\tilde{V}:=V\circ p\in C^{\infty}\left(G\right)$
(i.e. constant on the fibers). So results presented in this Section
will also apply to the prequantum operator (\ref{eq:def_prequantum_operator_F}).
\item The special potential function $V_{0}:=\tilde{V}_{0}\circ E_{u}$
in (\ref{eq:def_Potential_V0}) is derived as in (\ref{eq:def_V})
from the smooth function $\tilde{V}_{0}\in C^{\infty}\left(G\right)$
given by 
\begin{equation}
\tilde{V}_{0}\left(l\right)=\frac{1}{2}\log\left(\mathrm{det}Df_{x}\mid_{l}\right),\qquad l\in G,\quad x=p\left(l\right)\label{eq:V_tilde_0}
\end{equation}

\end{enumerate}
\end{rem}
Recall the principal $U\left(1\right)$ bundle $\pi:P\rightarrow M$
and the prequantum map $\tilde{f}:P\rightarrow P$ defined in Theorem
\ref{thm:Bundle-P_map_f_tilde}.
\begin{defn}
The principal bundle $\pi:P\rightarrow M$ with connection $A$ can
be pulled back by $p:G\rightarrow M$ on $G$ and gives a principal
bundle 
\[
\pi_{G}:P_{G}\rightarrow G
\]
 with connection. Let 
\begin{equation}
\tilde{f}_{G}:P_{G}\rightarrow P_{G}\label{eq:def_ftilde_G}
\end{equation}
 be the lift of the map $\tilde{f}:P\rightarrow P$ on $P_{G}\rightarrow P$. \end{defn}
\begin{rem}
Of course $\tilde{f}_{G}$ is also a lift of the map $f_{G}:G\rightarrow G$,
i.e $\pi_{G}\circ\tilde{f}_{G}=f_{G}\circ\pi_{G}$. By construction,
for any $x\in M$, the restricted bundle $P_{G}\rightarrow G_{x}$
is trivial. In fact we have a 3 dimensional commutative diagram:
\[
\xymatrix{P_{G}\ar[rr]^{\tilde{f}_{G}}\ar[dd]^{\tilde{\pi}}\ar[dr]^{\pi_{G}} &  & P_{G}\ar[dr]^{\pi_{G}}\ar[dd]|(0.3){\tilde{\pi}}|!{[dl];[dr]}\hole\\
 & G\ar[rr]|(0.3){f_{G}}\ar[dd]|(0.3){p} &  & G\ar[dd]^{p}\\
P\ar[rr]|(0.3){\tilde{f}}|!{[ur];[dr]}\hole\ar[dr]^{\pi} &  & P\ar[rd]^{\pi}\\
 & M\ar[rr]^{f} &  & M
}
\]

\end{rem}
\begin{center}{\color{red}\fbox{\color{black}\parbox{16cm}{
\begin{defn}
Let $\tilde{V}\in C^{\infty}\left(G\right)$ . The \textbf{prequantum
transfer operator} is defined by
\begin{equation}
\widetilde{F}:\begin{cases}
C^{\infty}\left(P_{G}\right) & \rightarrow C^{\infty}\left(P_{G}\right)\\
u & \mapsto e^{\tilde{V}}\cdot\left|\mbox{det}\left(Df_{G}\mid_{\mathrm{ker}p}\right)\right|^{-1}\cdot u\circ\tilde{f}_{G}^{-1}
\end{cases}\label{eq:def_transfert_gutzwiller}
\end{equation}
It preserves the space of the $N$-th Fourier modes for every $N\in\mathbb{Z}$:
\begin{equation}
C_{N}^{\infty}\left(P_{G}\right):=\left\{ u\in C^{\infty}\left(P_{G}\right)\,\mid\,\forall p\in P_{G},\forall e^{i\theta}\in U\left(1\right),\quad u\left(e^{i\theta}p\right)=e^{iN\theta}u\left(p\right)\right\} \label{eq:def_C_N_PG}
\end{equation}
and its restriction is denoted
\begin{equation}
\widetilde{F}_{N}:=\widetilde{F}_{/C_{N}^{\infty}\left(P_{G}\right)}:\quad C_{N}^{\infty}\left(P_{G}\right)\rightarrow C_{N}^{\infty}\left(P_{G}\right).\label{eq:def_F_tilde_N}
\end{equation}

\end{defn}
}}}\end{center}
\begin{rem}
In the definition (\ref{eq:def_transfert_gutzwiller}), the additional
``potential function'' $\left|\mbox{det}\left(Df_{G}\mid_{\mathrm{ker}p}\right)\right|^{-1}>1$
has been put in order to compensate the attraction effect of the attractor
$E_{u}$ on the extended space $G$. (See the next subsection.)
\end{rem}

\subsubsection{\label{sub:Truncation-near_Eu}Truncation in a neighborhood of $E_{u}$}

In order to define the discrete spectrum of resonances we first have
to consider a specific truncation of the operator. Let $K_{0}\subset G$
be an open absorbing neighborhood of the attractor $E_{u}$ and 
\begin{equation}
K_{1}:=f_{G}\left(K_{0}\right)\Subset K_{0}\label{eq:def_K1}
\end{equation}
 i.e. $\overline{K_{1}}$ is a proper subset of $K_{0}$. See Figure
\ref{fig:bundle_G}. For any $n\geq1$, let 
\begin{equation}
K_{n}:=f_{G}^{n}\left(K_{0}\right).\label{eq:def_Kn}
\end{equation}
Then, by definition (of the absorbing neighborhood), we have that
\begin{equation}
E_{u}=\bigcap_{n=1}^{\infty}K_{n}.\label{eq:E_u_K_n}
\end{equation}
 Let $\chi\in C^{\infty}\left(G\right)$ be a function such that $\chi\left(l\right)=0$
for $l\notin K_{0}$, $\chi\left(l\right)=1$ for $l\in K_{1}$. We
denote 
\begin{equation}
\hat{\chi}:C^{\infty}\left(P_{G}\right)\rightarrow C^{\infty}\left(P_{G}\right)\label{eq:def_Chi_hat}
\end{equation}
 the multiplication operator by the function $\chi\circ p$, where
$p:G\to M$ is the projection. For any $n\geq1$ we have from (\ref{eq:def_K1})
that

\begin{equation}
\left(\widetilde{F}\hat{\chi}\right)^{n}=\widetilde{F}^{n}\hat{\chi}\label{eq:F_Chi_n}
\end{equation}
Also $\hat{\chi}$ preserves the space of equivariant functions $C_{N}^{\infty}\left(P_{G}\right)$
defined in (\ref{eq:def_C_N_PG}). By duality the operator $\widetilde{F}_{N}\hat{\chi}$
extends to equivariant distributions $\mathcal{D}'_{N}\left(P_{G}\right)$.
From definition (\ref{eq:def_Kn}) and (\ref{eq:F_Chi_n}) we have
that, for any $n\geq1$ and $u\in\mathcal{D}'\left(P_{G}\right)$,
\begin{equation}
\mbox{supp}\left(\left(\widetilde{F}_{N}\hat{\chi}\right)^{n}u\right)\subset\pi_{G}^{-1}\left(K_{n}\right).\label{eq:supp_F_phi-1}
\end{equation}

\subsubsection{Results on the spectrum of the prequantum operator $\widetilde{F}_{N}$}

The following theorem (and its proof) is similar to Theorem \ref{thm:Discrete-spectrum}
but concerns the transfer operator $\widetilde{F}_{N}$ defined in
(\ref{eq:def_F_tilde_N}).

\begin{center}{\color{blue}\fbox{\color{black}\parbox{16cm}{
\begin{thm}
\textbf{\label{thm:Discrete-spectrum-F_G}``Discrete spectrum''.}
For every $N\in\mathbb{Z}$, there exists a family of Hilbert spaces
$\mathcal{H}_{N}^{r}\left(P_{G}\right)$ for arbitrarily large $r>0$,
such that $C_{N}^{\infty}\left(P_{G}\right)\subset\mathcal{H}_{N}^{r}\left(P_{G}\right)\subset\mathcal{D}'_{N}\left(P_{G}\right)$
and such that the operator $\widetilde{F}_{N}\hat{\chi}$ extends
to a bounded operator 
\[
\widetilde{F}_{N}\hat{\chi}:\mathcal{H}_{N}^{r}\left(P_{G}\right)\rightarrow\mathcal{H}_{N}^{r}\left(P_{G}\right),
\]
and its essential spectral radius $r_{ess}\left(\hat{F}_{N}\hat{\chi}\right)$
is bounded by $\varepsilon_{r}:=\frac{1}{\lambda^{r}}\max e^{\tilde{V}}$,
which shrinks to zero if $r\rightarrow+\infty$. The discrete eigenvalues
of $\hat{F}_{N}\hat{\chi}$ on the domain $\left|z\right|\geq\varepsilon_{r}$
(and their associated eigenspaces) are independent on the choice of
$\chi$ and $r$. The support of an eigendistribution is contained
in the attractor $\pi_{G}^{-1}\left(E_{u}\right)$. These discrete
eigenvalues are called \emph{Ruelle-Pollicott resonances} and are
denoted $\mbox{Res}\left(\widetilde{F}_{N}\right):=\left\{ \lambda_{i}\right\} _{i}\subset\mathbb{C}^{*}$.
\end{thm}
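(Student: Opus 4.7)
\bigskip

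\textbf{Plan of proof.} The strategy is to view $\widetilde{F}_{N}\hat{\chi}$ as a Fourier Integral Operator on the $U(1)$-principal bundle $P_{G}\to G$, restricted to the $N$-th Fourier mode, and apply the semiclassical/microlocal framework used already in \cite{fred-roy-sjostrand-07} and in our Theorem \ref{thm:Discrete-spectrum}. First I would identify the canonical relation of $\widetilde{F}_{N}\hat{\chi}$ on $T^{*}P_{G}$: away from the zero section, equivariance with respect to the $U(1)$-action reduces the analysis to the action on $T^{*}G$ given by the symplectic lift $(Df_{G}^{*})^{-1}$ of $f_{G}$. The important point is that the attractor $E_{u}\subset G$ of $f_{G}$ makes the dynamics on $G$ only partially hyperbolic, but once combined with $\hat{\chi}$ the iterates $(\widetilde{F}_{N}\hat{\chi})^{n}$ have support contained in $\pi_{G}^{-1}(K_{n})$ and hence asymptotically concentrate on $\pi_{G}^{-1}(E_{u})$, by (\ref{eq:supp_F_phi-1}). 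The trapped set for the lifted canonical dynamics on $T^{*}G$ is therefore the zero section over $E_{u}$.

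Next I would construct an \emph{escape function} $W_{r}\in C^{\infty}(T^{*}P_{G})$ of order $r$, equivariant under $U(1)$, which is negative and decreases strictly by at least $r\log\lambda$ under the canonical lift of $\widetilde{f}_{G}$ outside a small conic neighborhood of the trapped set (this uses the hyperbolicity rates (\ref{eq:def_dynamics}) of $f$, together with the contraction of $f_{G}$ onto $E_{u}$ in the vertical directions). The anisotropic Sobolev space $\mathcal{H}_{N}^{r}(P_{G})$ is then defined as the completion of $C_{N}^{\infty}(P_{G})$ with respect to the norm $\|\mathrm{Op}(e^{W_{r}})u\|_{L^{2}}$, as in \cite{fred-roy-sjostrand-07}. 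Standard microlocal arguments (Egorov's theorem for the FIO $\widetilde{F}_{N}\hat{\chi}$, symbolic calculus and boundedness of pseudodifferential operators) then give a decomposition
\[
\widetilde{F}_{N}\hat{\chi}=R_{r}+K_{r}\qquad\text{on }\mathcal{H}_{N}^{r}(P_{G}),
\]
where $K_{r}$ is compact (in fact of finite rank modulo smoothing) and $\|R_{r}\|\le\frac{1}{\lambda^{r}}\max e^{\tilde{V}}=\varepsilon_{r}$. By Nussbaum's formula this yields the bound on the essential spectral radius.

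The discrete eigenvalues in $|z|\ge\varepsilon_{r}$ and their generalized eigenspaces are then independent of the choice of $r$ and of the cut-off $\chi$ by the standard density argument of Baladi--Tsujii \cite{Baladi-Tsujii08}: different escape functions produce spaces with a common dense subspace $C_{N}^{\infty}(P_{G})$, and any generalized eigendistribution automatically lies in the smaller space (hence the associated eigenprojector does not depend on $r$); independence of $\chi$ follows because two choices $\chi,\chi'$ agree on the absorbing set $K_{1}$ so $(\widetilde{F}_{N}\hat{\chi})^{n}-(\widetilde{F}_{N}\hat{\chi}')^{n}$ is asymptotically supported outside $K_{n}$, which vanishes in the limit $n\to\infty$ since $\bigcap_{n}K_{n}=E_{u}$. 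Finally, the support statement for eigendistributions follows directly from (\ref{eq:supp_F_phi-1}): if $\widetilde{F}_{N}\hat{\chi}u=\lambda u$ with $\lambda\neq 0$, then $u=\lambda^{-n}(\widetilde{F}_{N}\hat{\chi})^{n}u$ is supported in $\pi_{G}^{-1}(K_{n})$ for all $n\ge 1$, hence in $\pi_{G}^{-1}(E_{u})$.

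I expect the main obstacle to be the construction of an escape function adapted simultaneously to the hyperbolicity of $f$ in the cotangent directions over $M$ and to the \emph{contraction} of $f_{G}$ along the Grassmanian fibers toward $E_{u}$: one must choose $W_{r}$ so that $W_{r}\circ(Df_{G}^{*})^{-1}-W_{r}\le -r\log\lambda$ outside a neighborhood of the trapped set while remaining bounded on the attracting directions (where no expansion is available and the truncation $\hat{\chi}$ has to absorb the excess). Once the escape function is correctly designed, the rest of the argument is a routine repetition of the proof of Theorem \ref{thm:Discrete-spectrum}, whose details will be carried out in the later sections devoted to proofs.
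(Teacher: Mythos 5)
Your general skeleton is the right one and matches the paper's philosophy: view $\widetilde{F}_{N}\hat{\chi}$ as a $U(1)$-equivariant FIO, localize via $\hat\chi$ to an absorbing neighborhood of the attractor $E_{u}$, build an anisotropic Sobolev space from an escape function, and conclude quasi-compactness; and the support statement is argued precisely as the paper does, directly from (\ref{eq:supp_F_phi-1}) and (\ref{eq:E_u_K_n}).

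There is, however, a genuine gap at the step you yourself flag as ``the main obstacle,'' and you also mis-identify what that obstacle actually is. You worry that ``no expansion is available'' in the attracting fiber directions and that $\hat\chi$ must absorb the excess. In fact there is no excess to absorb: the fiber-direction contraction of $f_{G}$ onto $E_{u}$ is dual, in $T^{*}G$, to an \emph{expansion} of the conjugate momenta $\xi_{s}$, so $\xi_{s}$ behaves exactly like an unstable covariable while the position variable $s$ behaves like a stable one. The escape function handles both via the usual cone structure; this is precisely the grouping used in Definition \ref{def:escape_function_Hr-1}, where $\xi_{s}$ is paired with $\zeta_{p}$ and $s$ with $\zeta_{q}$. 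The true obstruction, which you never address, is that the attracting section $E_{u}\subset G$ (and therefore the trapped set in $T^{*}G$) is only H\"older continuous of exponent $\beta<1$ in the base variable, so a smooth escape function of the type you invoke cannot be aligned with a neighborhood of the trapped set with the required uniformity. The paper resolves this through Proposition \ref{prop:Local-charts-1}: it introduces $\hbar$-dependent local coordinate charts on $G$ that rescale the fiber variables by $\hbar^{-(1-\beta)(1/2-\theta)-2\theta}$, so that at the wave-packet scale $\hbar^{1/2-\theta}$ the H\"older section $E_{u}$ looks flat, after which the standard escape function of Definition \ref{def:escape_function_Hr-1} works in Setting I$^{\mathrm{ext}}$/II$^{\mathrm{ext}}$; Theorem \ref{thm:Discrete-spectrum-F_G} is then deduced from the band-structure Theorem \ref{hm:More-detailled-description_of_spectrum_ext}. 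This singular-rescaling device is the essential new ingredient and is exactly what makes the step you call ``a routine repetition of the proof of Theorem \ref{thm:Discrete-spectrum}'' nontrivial; without it your argument does not close.
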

}}}\end{center}

The fact that the support of an eigendistribution is contained in
the attractor $\pi_{G}^{-1}\left(E_{u}\right)$ is a direct consequence
of (\ref{eq:supp_F_phi-1}) and (\ref{eq:E_u_K_n}).

The next Theorem is similar to Theorem \ref{thm:band_structure} but
here we restrict ourselves to the description of the external band
\[
\mathcal{A}_{0}:=\left\{ z\in\mathbb{C},\left|z\right|\in\left[r_{0}^{-},r_{0}^{+}\right]\right\} 
\]
although description of internal bands may be possible also. Recall
that $V\left(x\right)$ is defined in (\ref{eq:def_V}), $D\left(x\right):=V\left(x\right)-V_{0}\left(x\right)$
is the damping function and $D_{n}\left(x\right)=\sum_{j=1}^{n}D\left(f_{G}^{j}\left(x\right)\right)$is
the Birkhoff sum of $D\left(x\right)$.

\begin{center}{\color{blue}\fbox{\color{black}\parbox{16cm}{
\begin{thm}
\textbf{\label{thm:band_structure-1}``External band''}. For any
$\varepsilon>0$, there exists $C_{\varepsilon}>0$, $N_{\varepsilon}\geq1$
such that, for any $N\geq N_{\varepsilon}$, we have
\[
\mbox{Res}\left(\widetilde{F}_{N}\right)\subset\left\{ z\in\mathbb{C},\,\left|z\right|\in\left[0,r_{1}^{+}+\varepsilon\right]\bigcup\left[r_{0}^{-}-\varepsilon,r_{0}^{+}+\varepsilon\right]\right\} 
\]
with
\begin{eqnarray}
r_{0}^{-} & := & \liminf_{n\rightarrow\infty}\inf_{x\in M}\left(e^{\frac{1}{n}D_{n}\left(x\right)}\right),\quad r_{0}^{+}:=\limsup_{n\rightarrow\infty}\sup_{x\in M}\left(e^{\frac{1}{n}D_{n}\left(x\right)}\right),\label{eq:def_r+_r--1}
\end{eqnarray}
\[
r_{1}^{+}:=\limsup_{n\rightarrow\infty}\sup_{x\in M}\left(e^{\frac{1}{n}D_{n}\left(x\right)}\left\Vert Df_{x}^{n}|_{E_{u}}\right\Vert _{\mathrm{min}}^{-1/n}\right)
\]
For any $z\in\mathbb{C}$ such that $r_{1}^{+}+\varepsilon<\left|z\right|<r_{0}^{-}-\varepsilon$
or $\left|z\right|>r_{0}^{+}+\varepsilon$ we have:
\begin{equation}
\left\Vert \left(z-\widetilde{F}_{N}\right)^{-1}\right\Vert \leq C_{\varepsilon}.\label{eq:bound_resolvent-1}
\end{equation}

\end{thm}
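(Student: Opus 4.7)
The plan is to adapt the proof of Theorem \ref{thm:band_structure} to the operator $\widetilde{F}_{N}\hat{\chi}$ acting on an anisotropic Sobolev space $\mathcal{H}_{N}^{r}(P_{G})$, viewed as a Fourier Integral Operator (FIO) associated to the canonical lift of $\tilde{f}_{G}^{-1}$ to $T^{*}P_{G}$, in the semiclassical regime $N\to\infty$. The key new feature compared to Theorem \ref{thm:band_structure} is that the relevant trapped set is no longer the full zero section of $T^*P$, but the lifted attractor $\pi_{G}^{-1}(E_{u})$ inside $P_{G}$, and the compensating factor $|\det(Df_{G}|_{\ker p})|^{-1}$ in the definition \eqref{eq:def_transfert_gutzwiller} must be matched against the normal contraction of $f_{G}$ towards $E_{u}$.

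First I would construct anisotropic Hilbert spaces $\mathcal{H}_{N}^{r}(P_{G})$ analogous to those in Theorem \ref{thm:Discrete-spectrum-F_G}, but refined so that their local norms near a point of $\pi_{G}^{-1}(E_{u})$ suppress microlocal components supported in the unstable cone directions of $T^{*}P_{G}$ (transverse to the cotangent zero section) while remaining neutral along the fiber directions and along $E_{u}$ itself. By the absorbing property \eqref{eq:E_u_K_n} and the support property \eqref{eq:supp_F_phi-1}, iterates $(\widetilde{F}_{N}\hat{\chi})^{n}$ concentrate on $\pi_{G}^{-1}(K_{n})$, which collapses onto $\pi_{G}^{-1}(E_{u})$. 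On this attractor the map $\tilde{f}_{G}$ reduces (via the section $E_{u}:M\to G$) to the original prequantum map $\tilde{f}:P\to P$, and the smooth potential $\tilde{V}$ evaluated along $E_{u}$ reproduces the Hölder potential $V=\tilde{V}\circ E_{u}$ of \eqref{eq:def_V}.

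Next I would compute the semiclassical action on wave packets. A wave packet localized at a point $x$ with cotangent frequency $\xi$ propagates to one localized at $f(x)$ with frequency $(Df_{x}^{*})^{-1}\xi$, multiplied by a scalar amplitude. Along an $n$-orbit the amplitude equals, up to lower order terms in $\hbar=1/(2\pi N)$,
\begin{equation}
e^{V_{n}(x)}\cdot\bigl|\det(Df_{G,x}^{n}|_{\ker p})\bigr|^{-1}
\end{equation}
where the $|\det(Df_{G}|_{\ker p})|^{-1}$ factor exactly compensates the contraction of $f_{G}$ transverse to $E_{u}$ (this is the point of the normalization in \eqref{eq:def_transfert_gutzwiller}). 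After combining this with the Jacobian that appears in passing from $\tilde{V}$ to $\tilde{V}_{0}+D$ via \eqref{eq:V_tilde_0} (so that $V-V_{0}=D$), the effective iterated multiplier on the zero section of $T^{*}P_{G}$ restricted to the attractor becomes $e^{D_{n}(x)}$. Taking spectral radii via $\limsup_{n}\|(\widetilde{F}_{N}\hat{\chi})^{n}\|^{1/n}$ yields the upper bound $r_{0}^{+}$ and symmetric lower bound $r_{0}^{-}$ in \eqref{eq:def_r+_r--1}. The separation from the remaining spectrum comes from the first cotangent "band" in the direction conjugate to $E_{u}$: a wave packet with transverse momentum $\xi\ne0$ acquires an additional factor $\|Df_{x}^{n}|_{E_{u}}\|^{-1}$ per iteration (because $(Df^{*})^{-1}$ contracts that cotangent direction), whose supremum gives $r_{1}^{+}$. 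The resolvent bound \eqref{eq:bound_resolvent-1} in the annular gap then follows from standard $L^{2}$-functional-calculus/second-quantization estimates: one constructs for $N$ large enough an approximate spectral projector onto the external band using semiclassical cutoffs in the $\xi$-direction, expresses $(z-\widetilde{F}_{N})^{-1}$ as a geometric series on the complement, and uses the strict inequalities $|z|>r_{1}^{+}+\varepsilon$ or $|z|<r_{0}^{-}-\varepsilon$ to obtain a $C_{\varepsilon}$-bound uniform in $\hbar$.

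The main obstacle will be the fact that the attractor $E_{u}\subset G$ is only Hölder continuous with exponent $\beta$ in \eqref{eq:def_beta}; standard FIO calculus and the construction of the escape-function defining $\mathcal{H}_{N}^{r}(P_{G})$ require some smoothness of the associated Lagrangian. The strategy to deal with this is to perform all semiclassical constructions on the smooth ambient space $(G,\tilde{V})$ and to exploit the exponential contraction of $f_{G}$ towards $E_{u}$: remainders due to Hölder irregularity of $E_{u}$ appear coupled with positive powers of the contraction rate and of $\hbar^{\beta/2}$, which can be absorbed into the claimed $\varepsilon$-neighborhoods of the spectral annuli provided $N\ge N_{\varepsilon}$. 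This Hölder-to-smooth trade-off, already hinted at in Remark \ref{remarks_1.17}(2) via \eqref{eq:choice_V_smooth}, is what makes the present formulation of the theorem useful (in particular applicable to $V=V_{0}$).
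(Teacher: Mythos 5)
Your overall strategy is the right one: treat $\widetilde F_N\hat\chi$ as a semiclassical FIO acting on an anisotropic Sobolev space, decompose near the attractor $E_u\subset G$, derive the band structure from a linear (metaplectic) model, and deduce the resolvent bound in the gap via a Neumann series around the block-diagonal part. That is essentially the structure of the paper's argument (Section~\ref{sec:2}, culminating in the analogue of Theorem~\ref{thm:More-detailled-description_of_spectrum} and the Neumann-series reduction of Subsection~\ref{sub:Proof-of-Theorem_Nad_annuli}). However, the key technical step you wave at is exactly where the paper has to do genuine work, and your proposed resolution of it is not a working plan.

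The crucial missing idea is the choice of $\hbar$-dependent \emph{anisotropic local coordinates on $G$}. The paper (Proposition~\ref{prop:Local-charts-1}) takes charts $\widetilde\kappa_i$ whose fiber coordinate $s$ is rescaled by $\hbar^{-(1-\beta)(1/2-\theta)-2\theta}$, so that the Hölder-$\beta$ section $E_u$ becomes, in those coordinates, an $O(\hbar^{1/2+\theta})$-flat graph and the coordinate changes $\widetilde\kappa_{j,i}$ satisfy the uniform conditions (G0)--(G3) of Setting~II$^{\mathrm{ext}}$. Without that rescaling the Hölder irregularity of $E_u$ does \emph{not} automatically produce errors of size $\hbar^{\beta/2}$: a Hölder graph over a chart of diameter $\hbar^{1/2-\theta}$ has deviations of order $\hbar^{\beta(1/2-\theta)}$, which is not small relative to the $\hbar^{1/2}$ wave-packet scale unless you also inflate the transverse coordinate; that inflation is what makes the linear model (Subsection~\ref{sub:Discussion-linear_model}, with the extra contracting block $\widehat A$) a uniformly good approximation. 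Your proposal should explicitly introduce this scaling and verify the analogues of Settings~I$^{\mathrm{ext}}$/II$^{\mathrm{ext}}$ for the rescaled coordinate changes and for $f_G$; otherwise the "remainders absorbed into $\varepsilon$" claim is unsupported.

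Two smaller points. First, the "trapped set" lives in phase space, not in $P_G$: it is the zero cotangent section restricted over $E_u$ (in the rescaled coordinates, $\{\zeta=0,\ s=0\}$), and the escape function $\widetilde{\mathcal W}_\hbar^r$ must penalize both the $\zeta$-directions and the fiber pair $(s,\xi_s)$, not just the base cotangent directions. Second, the resolvent bound~\eqref{eq:bound_resolvent-1} does not come directly from a geometric series on "the complement of an approximate spectral projector"; as in Subsection~\ref{sub:Proof-of-Theorem_Nad_annuli} one first builds a full decomposition $\widetilde F_N^m=A_m+B_m$ with $A_m$ block-diagonal with respect to the $\tau^{(k)}$ and $\|B_m\|=O(\hbar^\epsilon)$, bounds $\|(z-A_m)^{-1}\|$ using $\|A_{k,m}\|$ and $\|A_{k,m}^{-1}\|$ on each block, and only then sums the Neumann series; the upper/lower norm bounds on each block (items (4)--(5) of the analogue of Theorem~\ref{thm:More-detailled-description_of_spectrum}) are what produce $r_0^\pm$ and $r_1^+$, and these must be established with constants independent of $\hbar$.
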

}}}\end{center}

In the rest of this section we will assume that the potential $\tilde{V}$
is such that the external annulus $\mathcal{A}_{0}$ is\textbf{ }isolated
i.e. $r_{1}^{+}<r_{0}^{-}$, giving a ``spectral gap''. We take
the same definition of $\mathcal{H}_{\hbar}$ and $\hat{\mathcal{F}}_{\hbar}$
as in Definition \ref{def:quantum_operator}, for the spectral restriction
of $\widetilde{F}_{N}$ to the external band $\mathcal{A}_{0}$.

\begin{center}{\color{blue}\fbox{\color{black}\parbox{16cm}{
\begin{thm}
\textbf{\label{thm:weyl_law_extended}Index formula and Weyl law:}
\begin{equation}
\dim\mathcal{H}_{\hbar}=\int_{M}\left[e^{N\omega}\mbox{Todd}\left(TM\right)\right]_{2d}=N^{d}\mbox{Vol}_{\omega}\left(M\right)+\mathcal{O}\left(N^{d-1}\right)\label{eq:index_formula_quantum_op}
\end{equation}

\end{thm}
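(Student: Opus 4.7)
The plan is to reduce Theorem~\ref{thm:weyl_law_extended} to the already-proven Theorem~\ref{thm:Weyl-formula} by showing that the rank of the spectral projector on the external band of $\widetilde{F}_N$ equals the rank of the analogous projector for a prequantum transfer operator $\hat{F}_N$ on $P$ with a suitably chosen smooth potential $V_1 \in C^\infty(M)$. Since the right hand side of \eqref{eq:index_formula_quantum_op} is purely topological and independent of $V$, it suffices to match dimensions; the asymptotic expansion then follows immediately from $[e^{N\omega}\mathrm{Todd}(TM)]_{2d} = \frac{N^d\omega^d}{d!} + O(N^{d-1})$ exactly as noted after Theorem~\ref{thm:Weyl-formula}.

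The first main step is a deformation argument. Given $\tilde{V}\in C^\infty(G)$ with $r_1^+<r_0^-$, choose a smooth $V_1\in C^\infty(M)$ close to $V_0$ in $C^0$ norm (using a smooth approximation $\widetilde{E}_u$ of the unstable bundle as in \eqref{eq:choice_V_smooth}) so that the gap condition is also satisfied for $\hat{F}_N$ with potential $V_1$. Then interpolate $\tilde{V}_t = (1-t)\tilde{V} + t(V_1\circ p)$. Provided the gap stays open along the path (which one arranges by appropriate choice of $V_1$, since the gap condition is an open condition on the potential through the Birkhoff averages defining $r_k^\pm$), the uniform resolvent bound \eqref{eq:bound_resolvent-1} combined with standard perturbation theory shows that the spectral projector $\Pi_\hbar^{(t)}$ depends continuously on $t$ in operator norm on $\mathcal{H}_N^r(P_G)$, so its rank is constant.

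The second main step is to identify, at $t=1$ with $\tilde{V}_1 = V_1\circ p$, the rank of the spectral projector for $\widetilde{F}_N$ on its external band with the rank of the spectral projector for $\hat{F}_N$ on its external band. By Theorem~\ref{thm:Discrete-spectrum-F_G}, every eigendistribution of $\widetilde{F}_N$ with eigenvalue in the external annulus $\mathcal{A}_0$ is supported on the attractor $\pi_G^{-1}(E_u)$. Since $E_u:M\to G$ is a (Hölder) section, the bundle $P_G\to G$ restricts to a copy of $P$ over $E_u$, and pullback by the Hölder map $\widetilde{E}_u:P\to P_G$ provides a linear map from the external band eigenspace of $\widetilde{F}_N$ to distributions on $P$. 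The potential factor $|\det(Df_G|_{\ker p})|^{-1}$ in the definition \eqref{eq:def_transfert_gutzwiller} of $\widetilde{F}$ is precisely the Jacobian of $f_G$ in the normal direction to the attractor $E_u$, so this pullback intertwines $\widetilde{F}_N$ on eigendistributions with $\hat{F}_N$ (with potential $V_1 = \tilde{V}_1\circ E_u = V_1$) on the image. An argument with an auxiliary cutoff and the transversal expansion at $E_u$ shows the pullback is injective and surjective onto the external band eigenspace of $\hat{F}_N$.

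The main obstacle is the rigorous implementation of the Hölder pullback $\widetilde{E}_u^*$ at the level of anisotropic Sobolev spaces $\mathcal{H}_N^r(P_G)$ and $\mathcal{H}_N^r(P)$: since $E_u$ is only $C^\beta$, one cannot directly pull back Sobolev distributions, and the eigendistributions of $\widetilde{F}_N$ themselves may have complicated regularity transverse to $E_u$. The resolution is to exploit the fact that, on the external band, eigendistributions are in fact smooth along the attractor direction $E_u$ itself (this is the semiclassical manifestation of the hyperbolicity near the attractor), so the Hölder nature of $E_u$ only enters through a change of section that can be controlled using the regularity of the attractor's foliation and the specific form of the anisotropic Sobolev norms used in the proof of Theorem~\ref{thm:Discrete-spectrum-F_G}. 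Once this intertwining is established, the rank equality together with Theorem~\ref{thm:Weyl-formula} applied to $\hat{F}_N$ with smooth potential $V_1$ yields \eqref{eq:index_formula_quantum_op}.
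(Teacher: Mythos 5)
Your overall strategy — reduce to Theorem~\ref{thm:Weyl-formula} by matching the rank of the external spectral projector of $\widetilde{F}_N$ to that of $\hat F_N$ for a smooth potential on $M$ — matches the paper's. But both steps of the implementation differ from the paper's, and the second has a genuine gap.

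On step 1: your deformation argument $\tilde V_t=(1-t)\tilde V+t(V_1\circ p)$ requires the gap $r_1^+<r_0^-$ to stay open along the whole path, and you acknowledge this but only say one "arranges it by appropriate choice of $V_1$". Since $r_k^\pm$ depend on $V$ through $\liminf/\limsup$ of Birkhoff averages and on nonlinear quantities involving $\|Df^n_x|_{E_u}\|_{\min/\max}$, this is not obvious for an arbitrary admissible $\tilde V$. The paper avoids the deformation entirely: by Theorem~\ref{hm:More-detailled-description_of_spectrum_ext} and the estimate $\|\Pi_\hbar-\tau^{(0)}\|\le C\hbar^\epsilon$, the rank of the external spectral projector $\Pi_\hbar$ coincides (for small $\hbar$) with the rank of $\tau^{(0)}$, which is built from $\boldt_\hbar^{(0)}$ and the anisotropic Sobolev structure alone, with no dependence on the potential. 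So the rank is automatically the same for every admissible $\tilde V$ with isolated external band — no homotopy needed.

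On step 2: you propose to relate the two external bands by a pullback $\widetilde{E}_u^*$ along the Hölder section $E_u:M\to G$ and correctly flag the regularity obstacle, but your proposed resolution ("eigendistributions are smooth along the attractor direction, so the Hölder nature only enters through a change of section that can be controlled") does not withstand scrutiny — the anisotropic Sobolev space $\mathcal{\widetilde{H}}_\hbar^r(P_G,E_u)$ is designed precisely so that elements need not be regular transverse to $E_u$, and in any case pulling back a distribution by a $C^\beta$ map is not a well-defined operation. The paper's Proposition~\ref{prop:properties_of_pi_*} instead uses the \emph{pushforward} $p_*$ along the smooth bundle projection $p:G\to M$: eigendistributions of $\widetilde{F}_N$ in the external band are supported on the attractor, and $p_*$ (the dual of the smooth pullback $p^*$) is bounded $\mathcal{\widetilde{H}}_\hbar^r(P_G,E_u)_0\to\mathcal{H}_\hbar^r(P)$ and intertwines $\widetilde{F}_N$ with $\hat F_N$, with no regularity issue at all since $p$ is $C^\infty$. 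The Hölder section $E_u$ enters only through the measure $\mu_u=(E_u)_*(\mathrm{Vol}_\omega)$ used to construct a right inverse $\iota(u)=(u\circ p)\cdot\mu_u$, which requires only spectral information about a Perron--Frobenius operator on $G$, not a pullback. You should replace your $E_u^*$ with $p_*$; once this substitution is made the rest of your outline is sound and both parts of the argument close.
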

}}}\end{center}

The next Theorem is a particular case of Theorem \ref{thm:band_structure-1},
but there, compared to \ref{thm:band_structure}, we emphasize again
that the main interest is the case of the particular smooth potential
$\tilde{V}_{0}$ in (\eqref{eq:V_tilde_0}) giving $V_{0}\left(x\right)=\frac{1}{2}\log\left|\det\, Df_{x}|_{E_{u}\left(x\right)}\right|$,
$D=0$ hence $r_{0}^{+}=r_{0}^{-}=1$ in (\ref{eq:def_r+_r--1}),
so that the external annulus $\mathcal{A}_{0}$ coincides with the
unit circle.

\begin{center}{\color{blue}\fbox{\color{black}\parbox{16cm}{
\begin{thm}
\label{cor:The-special-choice_V0} Let $\widetilde{F}_{N}$ be the
transfer operator defined in (\ref{eq:def_transfert_gutzwiller})
with the special choice of the smooth potential $\tilde{V}_{0}\left(l\right)=\frac{1}{2}\log\left(\mathrm{det}Df_{x}\mid_{l}\right)$
on $G$. Then the operator $\widetilde{F}_{N}\hat{\chi}$ extends
to a bounded operator on $\mathcal{H}_{N}^{r}(P_{G})$. The Ruelle
spectrum of $\widetilde{F}_{N}\hat{\chi}$ \textbf{concentrates on
the unit circle} for $N=1/\left(2\pi\hbar\right)\rightarrow\infty$
and is separated from the internal resonances by a non vanishing asymptotic
spectral gap ($r_{1}^{+}<r_{0}^{-}=r_{0}^{+}=1$). That is, for any
given $\varepsilon>0$, its (discrete) spectrum is contained in 
\[
\left\{ ||z|-1|<\varepsilon\right\} \cup\left\{ |z|<r_{1}^{+}+\varepsilon\right\} 
\]
for sufficiently large $N$. The spectrum in $\left\{ ||z|-1|<\varepsilon\right\} $
obeys the Weyl law and the angular equidistribution law stated in
Theorem \ref{Thm:Distribution-of-resonances.}. (See Figure \ref{fig:semic_spectrum}). 
\end{thm}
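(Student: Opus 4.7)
The plan is to derive this statement as a direct corollary of Theorems \ref{thm:Discrete-spectrum-F_G}, \ref{thm:band_structure-1}, \ref{thm:weyl_law_extended} and (the analogue for $\widetilde{F}_N$ of) Theorem \ref{Thm:Distribution-of-resonances.}, by computing the relevant radii $r_0^\pm$, $r_1^+$ for the specific smooth potential $\tilde V_0(l)=\tfrac12\log(\det Df_x|_l)$.

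First I would unpack the damping function. By \eqref{eq:def_V} the restriction of $\tilde V_0$ to the attractor $E_u$ is $V_0(x)=\tilde V_0(E_u(x))=\tfrac12\log|\det Df_x|_{E_u(x)}|$, which coincides exactly with the reference potential in \eqref{eq:def_Potential_V0}. Hence
\[
D(x)=V_0(x)-V_0(x)\equiv 0 \qquad \text{and} \qquad D_n(x)\equiv 0 \quad \forall n\ge1.
\]
Plugging $D_n\equiv 0$ into the formulas \eqref{eq:def_r+_r--1} immediately yields
\[
r_0^-=r_0^+=1, \qquad r_1^+ = \limsup_{n\to\infty}\sup_{x\in M}\|Df_x^n|_{E_u}\|_{\min}^{-1/n}.
\]
Because $f$ is Anosov with expansion constant $\lambda>1$ as in \eqref{eq:def_dynamics}, one has $\|Df_x^n|_{E_u}\|_{\min}\ge \lambda^n$ for every $n$ and $x$, so $r_1^+\le 1/\lambda<1$. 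Consequently the exterior annulus $\mathcal A_0$ degenerates to the unit circle and is separated from the inner disk $\{|z|\le r_1^+\}$ by a fixed spectral gap independent of $\hbar$.

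Second, with these computations in hand, Theorem \ref{thm:Discrete-spectrum-F_G} (which applies since $\tilde V_0$ is $C^\infty$ on $G$) provides the extension of $\widetilde F_N\hat\chi$ as a bounded operator on $\mathcal H_N^r(P_G)$ with discrete spectrum outside an arbitrarily small disk for $r$ large. Theorem \ref{thm:band_structure-1} applied with $\varepsilon>0$ arbitrary then produces the desired localization
\[
\mathrm{Res}(\widetilde F_N)\subset\{\,|\,|z|-1|<\varepsilon\,\}\cup\{\,|z|<r_1^++\varepsilon\,\}
\]
for all $N$ sufficiently large, and the resolvent bound \eqref{eq:bound_resolvent-1} holds in the gap. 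The Weyl law for the number of resonances in the outer band follows verbatim from Theorem \ref{thm:weyl_law_extended}, which does not depend on the particular choice of potential once the external band is isolated.

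Finally, for the angular equidistribution, one applies the analogue of Theorem \ref{Thm:Distribution-of-resonances.} in this extended setting. Since $D\equiv 0$, the circle of concentration has radius $R=e^{\langle D\rangle}=1$, matching the already located outer band; the same argument, based on ergodicity of $f$ together with the uniform resolvent estimate \eqref{eq:bound_resolvent-1}, gives equidistribution of the arguments in $[0,2\pi)$. The only step requiring real attention is checking that this equidistribution argument, originally stated for $\hat F_N$ in Theorem \ref{Thm:Distribution-of-resonances.}, transports to the Grassmanian lift $\widetilde F_N\hat\chi$; this is the main (and quite mild) obstacle, and it is handled by observing that, after spectral restriction to the exterior band $\mathcal A_0$, eigendistributions of $\widetilde F_N\hat\chi$ are supported on $\pi_G^{-1}(E_u)$ (by Theorem \ref{thm:Discrete-spectrum-F_G}), so that the ergodic/large-deviation input from $f$ used by Sj\"ostrand's method in \cite{sjostrand_2000} carries over with no structural change.
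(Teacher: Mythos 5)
Your proposal is correct and follows the same route as the paper: specializing Theorem \ref{thm:band_structure-1} (with $D\equiv 0$ so $r_0^\pm=1$, and $r_1^+\le 1/\lambda<1$), invoking Theorem \ref{thm:weyl_law_extended} for the Weyl law, and then transporting the equidistribution argument of Theorem \ref{Thm:Distribution-of-resonances.} to the Grassmanian setting. The paper makes exactly these reductions (see the paragraph preceding the theorem and the remark at the start of Section \ref{sub:Proof-of-equidistribution}), so no further comment is needed.
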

}}}\end{center}

We will call the potential function $\tilde{V}_{0}$ above the \textbf{``potential
of reference''}. 

\begin{figure}

\begin{centering}
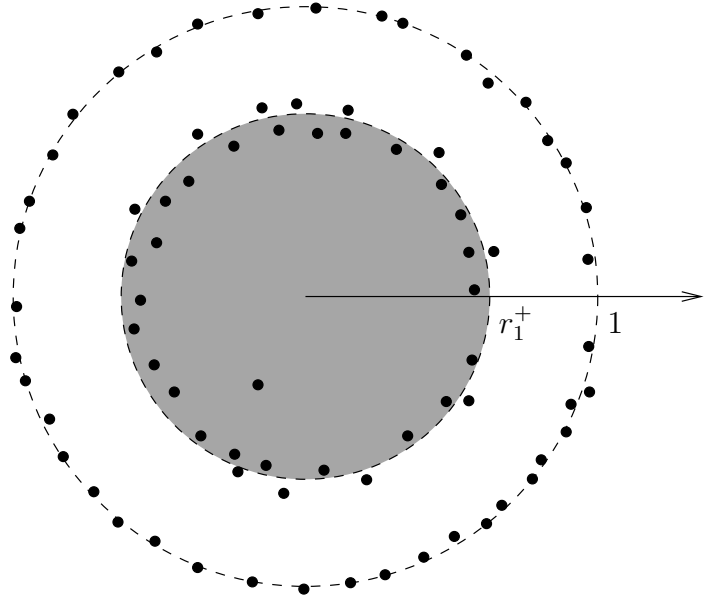\caption{\label{fig:semic_spectrum}With the particular potential $V_{0}=\frac{1}{2}\log\left|\det\, Df_{x}|_{E_{u}\left(x\right)}\right|$
Corollary \ref{cor:The-special-choice_V0} shows that the external
spectrum of the transfer operator concentrates uniformly on the unit
circle as $N=1/\left(2\pi\hbar\right)\rightarrow\infty$. (We have
not represented here the structure of the internal bands inside the
disc of radius $r_{1}^{+}$).}

\par\end{centering}

\end{figure}

\subsection{\label{sub:1.5Gutzwiller-trace-formula}Gutzwiller trace formula}

In this Section we continue to consider the prequantum transfer operators
$\widetilde{F}_{N}\hat{\chi}$ on the Grassmanian bundle $P_{G}$,
defined in (\ref{eq:def_transfert_gutzwiller}). We assume the condition
$r_{1}^{+}<r_{0}^{-}$. (This condition holds if we consider the potential
of reference $\widetilde{V}_{0}$.) As in Definition \ref{def:quantum_operator},
let $\Pi_{\hbar}:\mathcal{H}_{N}^{r}(P_{G})\to\mathcal{H}_{N}^{r}(P_{G})$
be the spectral projector for the external band and let $\mathcal{H}_{\hbar}$
be its image. Let $\hat{\mathcal{F}}_{\hbar}:\mathcal{H}_{\hbar}\rightarrow\mathcal{H}_{\hbar}$
be the restriction of $\widetilde{F}_{N}\hat{\chi}$ to $\mathcal{H}_{\hbar}$. 

\begin{center}{\color{blue}\fbox{\color{black}\parbox{16cm}{
\begin{thm}
\textbf{\label{Th1.5}``Gutzwiller trace formula for large time''.}
Let $\varepsilon>0$. For any $\hbar=1/\left(2\pi N\right)$ small
enough, in the limit $n\rightarrow\infty$, we have\textbf{ }
\begin{equation}
\left|\mbox{Tr}\left(\hat{\mathcal{F}}_{\hbar}^{n}\right)-\sum_{x=f^{n}\left(x\right)}\frac{e^{D_{n}\left(x\right)}e^{iS_{n,x}/\hbar}}{\sqrt{\left|\mbox{Det}\left(1-Df_{x}^{n}\right)\right|}}\right|<CN^{d}(r_{1}^{+}+\varepsilon)^{n}\label{eq:Gutz_formula_large_time}
\end{equation}
where $e^{i2\pi S_{n,x}}$ is the \textbf{action} of a periodic point
defined in (\ref{eq:def_action_Snx}) and $D_{n}$ is the Birkhoff
sum (\ref{eq:Birkhoff_Sum}) of the effective damping function $D\left(x\right)=V\left(x\right)-V_{0}\left(x\right)$.
\end{thm}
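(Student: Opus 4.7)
The plan is to compare $\mathrm{Tr}(\hat{\mathcal{F}}_{\hbar}^{n})$ with the flat trace $\mathrm{Tr}^{\flat}((\widetilde{F}_{N}\hat{\chi})^{n})$, which can be evaluated by the Atiyah-Bott-Lefschetz fixed point formula, and then to control the difference coming from the internal spectrum via a Weyl-type bound. The specific Jacobian weight $|\det(Df_{G}|_{\ker p})|^{-1}$ built into $\widetilde{F}$ in \eqref{eq:def_transfert_gutzwiller} is precisely what makes the Atiyah-Bott contribution produce the ``quantum'' square-root denominator.

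Using the spectral gap $r_{1}^{+}<r_{0}^{-}$ from Theorem \ref{thm:band_structure-1}, the spectral projector $\Pi_{\hbar}$ splits $\widetilde{F}_{N}\hat{\chi}=\hat{\mathcal{F}}_{\hbar}\oplus\hat{R}_{\hbar}$ where $\hat{R}_{\hbar}$ has spectrum in $\{|z|\le r_{1}^{+}\}$, so that
\[
\mathrm{Tr}(\hat{\mathcal{F}}_{\hbar}^{n})=\mathrm{Tr}^{\flat}((\widetilde{F}_{N}\hat{\chi})^{n})-\mathrm{Tr}^{\flat}(\hat{R}_{\hbar}^{n}).
\]
The second term is bounded by Weyl's inequality $|\mathrm{Tr}^{\flat}(\hat{R}_{\hbar}^{n})|\le\sum_{j}|\lambda_{j}(\hat{R}_{\hbar})|^{n}$, where the number of non-essential resonances above any fixed threshold is $O(N^{d})$ by applying the Weyl asymptotic of Theorem \ref{thm:weyl_law_extended} to a slightly enlarged annular window containing the inner bands. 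This gives $|\mathrm{Tr}^{\flat}(\hat{R}_{\hbar}^{n})|\le C N^{d}(r_{1}^{+}+\varepsilon)^{n}$.

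Next, I would compute $\mathrm{Tr}^{\flat}((\widetilde{F}_{N}\hat{\chi})^{n})$ by Atiyah-Bott-Lefschetz. By the absorbing property \eqref{eq:E_u_K_n} of the cutoff, for $n$ large enough the only fixed points of $f_{G}^{n}$ in the relevant support lie at $(x,E_{u}(x))$ for each periodic point $x=f^{n}(x)$, and the contribution reads
\[
\frac{e^{V^{n}(x)}\cdot\prod_{k=0}^{n-1}|\det(Df_{G}|_{\ker p})|^{-1}(f^{k}(x),E_{u}(f^{k}(x)))\cdot e^{iS_{n,x}/\hbar}}{|\det(I-Df^{n}_{G,(x,E_{u}(x))})|}.
\]
Using the identification $T_{E_{u}(x)}G_{x}\cong\mathrm{Hom}(E_{u},E_{s})$ on which $Df_{x}^{n}$ acts by conjugation $\varphi\mapsto Df^{n}|_{E_{s}}\circ\varphi\circ(Df^{n}|_{E_{u}})^{-1}$, together with the symplectic identity $\det(Df|_{E_{s}})=\det(Df|_{E_{u}})^{-1}$ at each periodic point, one computes the one-step vertical Jacobian to be $|\det(Df|_{E_{u}})|^{-2d}$. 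The definition $V_{0}(x)=\frac{1}{2}\log|\det Df_{x}|_{E_{u}}|$ then provides the correct book-keeping: from $e^{D_{n}(x)}=e^{V^{n}(x)}|\det Df_{x}^{n}|_{E_{u}}|^{-1/2}$ and the factorization $|\det(I-Df_{x}^{n})|=|\det Df_{x}^{n}|_{E_{u}}|\cdot|\det(I-(Df_{x}^{n}|_{E_{u}})^{-1})|\cdot|\det(I-Df_{x}^{n}|_{E_{s}})|$, together with the analogous expansion of $|\det(I-(A^{-1})^{T}\otimes B)|$ with $A=Df_{x}^{n}|_{E_{u}}$ and $B=Df_{x}^{n}|_{E_{s}}$, the contribution collapses up to a factor $1+O(\lambda^{-n})$ into the Gutzwiller amplitude $e^{D_{n}(x)}e^{iS_{n,x}/\hbar}/\sqrt{|\det(I-Df_{x}^{n})|}$. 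The residual corrections, summed over all periodic orbits, are in turn bounded by $CN^{d}(r_{1}^{+}+\varepsilon)^{n}$.

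The principal obstacle is the algebraic verification in the last step: one must show that the weight $|\det(Df_{G}|_{\ker p})|^{-1}$ together with the vertical factor in $|\det(I-Df^{n}_{G,(x,E_{u})})|$ combine, via the symplectic Lagrangian splitting $T_{x}M=E_{u}(x)\oplus E_{s}(x)$, exactly into the ``square root'' amplitude familiar from semiclassical analysis of quantum propagators, rather than some nearby expression with spurious factors. A secondary point is identifying $\mathrm{Tr}^{\flat}=\mathrm{Tr}$ for the relevant operators on the anisotropic Sobolev space $\mathcal{H}_{N}^{r}(P_{G})$, which follows from Lidskii's theorem once one checks that sufficiently high iterates of $\widetilde{F}_{N}\hat{\chi}$ are nuclear there, in the spirit of the construction underlying Theorem \ref{thm:Discrete-spectrum-F_G}.
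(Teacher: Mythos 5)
There is a genuine gap in the final step, and it is precisely the obstacle the paper's own remark after Theorem~\ref{Th1.5} warns against. You propose to compute the Atiyah--Bott fixed-point contribution of the \emph{scalar} operator $\widetilde{F}_{N}\hat{\chi}$ at each periodic point and then argue that this contribution ``collapses up to a factor $1+O(\lambda^{-n})$'' into the Gutzwiller amplitude $e^{D_{n}(x)}e^{iS_{n,x}/\hbar}/\sqrt{|\det(1-Df_{x}^{n})|}$, absorbing the residual corrections into the error term. But a per-orbit error of size $O(\lambda^{-n})$ (or any fixed negative power of $\lambda^{n}$) is \emph{not} summable with the required bound, because the number of periodic points of period $n$ grows like $\lambda^{dn}$. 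Concretely, as the paper computes, the per-orbit discrepancy between $|\det(1-Df_{x}^{n})|^{-1/2}$ and its asymptotic form is $O(\lambda^{-dn/2-n})$, which after summation over $\gtrsim\lambda^{dn}$ orbits gives an error of order $\lambda^{(d/2-1)n}$ --- divergent for $d\ge2$. Your claim that ``the residual corrections, summed over all periodic orbits, are in turn bounded by $CN^{d}(r_{1}^{+}+\varepsilon)^{n}$'' is therefore false as stated.

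The paper avoids this by not using an asymptotic identification at all. Instead, Lemma~\ref{lem:Gutz_1} establishes the \emph{exact} algebraic identity
\[
\sum_{x=f^{n}(x)}\frac{1}{\sqrt{|\det(1-Df_{x}^{n})|}}=\sum_{k=0}^{d}\sum_{m=0}^{d^{2}}(-1)^{k+m}\,\mathrm{Tr}^{\flat}\bigl(F_{k,m}^{n}\bigr),
\]
obtained by expanding the ``missing'' factors $|\det(1-L^{-(n)})|$ and $|\det(1-L^{-(n)}\otimes L^{-(n)})|$ via $\det(1+M)=\sum_{k}\mathrm{Tr}\,\Lambda^{k}(M)$ into flat traces of transfer operators $F_{k,m}$ acting on sections of the bundles $\Lambda^{k}(TG)\otimes\Lambda^{m}(VG)$. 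The $(k,m)=(0,0)$ term recovers the quantum operator, while the $(k,m)\neq(0,0)$ terms are then controlled not by perturbing each periodic-orbit contribution, but by applying the spectral gap estimates (and an analogue of Lemma~\ref{lem:11.8}) to those vector-valued transfer operators, which enjoy an extra contraction factor $\|\Lambda^{k}(Df|_{l}^{-1})\|<\lambda^{-k}$. Your computation of the vertical Jacobian $|\det(Df|_{E_{u}})|^{-2d}$ via the $\mathrm{Hom}(E_{u},E_{s})$ identification and the symplectic relation $\det(Df|_{E_{s}})=\det(Df|_{E_{u}})^{-1}$ is correct and is the same algebra the paper uses inside the proof of Lemma~\ref{lem:Gutz_1}; what is missing is the step that replaces the asymptotic $1+O(\lambda^{-n})$ estimate by the exact alternating expansion over $(k,m)$.

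A secondary point: the bound $|\mathrm{Tr}^{\flat}(\hat{R}_{\hbar}^{n})|\le C N^{d}(r_{1}^{+}+\varepsilon)^{n}$ is not a direct application of Weyl's inequality, since the relevant operator is not trace class on the anisotropic Sobolev space and the flat trace is defined distributionally, not as a convergent eigenvalue sum including the essential spectrum. The paper's Lemma~\ref{lem:11.8} proves exactly this bound (uniformly in $N$) by a phase-space decomposition $\widetilde{F}_{N}^{\mathrm{lift}}=F_{\mathrm{trace\text{-}free}}+F_{\mathrm{trace}}$, which is the additional ingredient your argument would need to cite.
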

}}}\end{center}

The proof of Theorem \ref{Th1.5} will be given in Section \ref{sec:9}.
It is based on the general and remarkable flat trace formula of Atiyah-Bott
that we recall in Lemma \ref{lem:Atiyah-Bott_flat_trace}.
\begin{rem}
We will see in Proposition \ref{prop:unicity-of-Gutz} below, the
simple but remarkable fact that the formula (\ref{eq:Gutz_formula_large_time})
determines the spectrum of $\hat{\mathcal{F}}_{\hbar}$ with multiplicities.
\end{rem}
~
\begin{rem}
For large time $n$ we have the equivalence 
\[
\left|\mbox{Det}\left(1-Df_{x}^{n}\right)\right|^{-1/2}\sim\left|\mbox{Det}\left(Df_{x}^{n}\mid_{E_{u}\left(x\right)}\right)\right|^{-1/2}=e^{-\left(V_{0}\right)_{n}\left(x\right)}
\]
Therefore one can expect that
\begin{align}
\mbox{Tr}\left(\hat{\mathcal{F}}_{\hbar}^{n}\right) & =\sum_{x=f^{n}\left(x\right)}e^{D_{n}\left(x\right)}e^{iS_{n,x}/\hbar}\left|\mbox{Det}\left(Df_{x}^{n}\mid_{E_{u}\left(x\right)}\right)\right|^{-1/2}+\mathcal{O}\left(1\right)N^{d}e^{\sup_{x}D_{n}}\lambda^{-n}\label{gutz2}\\
 & =\sum_{x=f^{n}\left(x\right)}e^{V_{n}\left(x\right)}e^{iS_{n,x}/\hbar}\left|\mbox{Det}\left(Df_{x}^{n}\mid_{E_{u}\left(x\right)}\right)\right|^{-1}+\mathcal{O}\left(1\right)N^{d}e^{\sup_{x}D_{n}}\lambda^{-n}
\end{align}
This is indeed true and this will be obtained in the proof. Let us
remark however that (\ref{gutz2}) is not an immediate consequence
of (\ref{eq:Gutz_formula_large_time}) and vise-versa due to the control
of the remainder. Indeed, suppose for example that (\ref{eq:Gutz_formula_large_time})
holds, with $D=V-V_{0}=0$ (for simplicity). We have for individual
periodic points $x$ and for $n\rightarrow\infty$, 
\[
\frac{1}{\sqrt{\left|\mbox{Det}\left(1-Df_{x}^{n}\right)\right|}}=\left|\mbox{Det}\left(Df_{x}^{n}\mid_{E_{u}\left(x\right)}\right)\right|^{-1/2}+\mathcal{O}\left(1\right)\lambda^{-\frac{dn}{2}-n}
\]
 but the number of periodic points grows like $\#\left\{ x=f^{n}\left(x\right)\right\} \geq\lambda^{dn}$
in general with $d=\frac{1}{2}\dim M$. Hence we can only deduce that
\begin{align*}
\left|\sum_{x=f^{n}\left(x\right)}\frac{e^{iS_{n,x}/\hbar}}{\sqrt{\left|\mbox{Det}\left(1-Df_{x}^{n}\right)\right|}}-\sum_{x=f^{n}\left(x\right)}e^{iS_{n,x}/\hbar}\left|\mbox{det}\left(Df_{x}^{n}\mid_{E_{u}\left(x\right)}\right)\right|^{-1/2}\right| & =\mathcal{O}\left(\lambda^{dn}\lambda^{-\frac{dn}{2}-n}\right)\\
 & =\mathcal{O}\left(\left(\lambda^{\frac{d}{2}-1}\right)^{n}\right)
\end{align*}
and if $d\geq2$ the term in the remainder is $\left(\lambda^{\frac{d}{2}-1}\right)\geq1$,
this is not enough to deduce (\ref{gutz2}).
\end{rem}

\subsubsection{\label{sub:The-question-of-natural-quantization}The question of
existence of a ``natural quantization''}

The following problem is a recurrent question in mathematics and physics
in the field of quantum chaos, since the discovery of the Gutzwiller
trace formula. For simplicity of the discussion we consider $V=V_{0}$
i.e. no effective damping, as in Figure \ref{fig:semic_spectrum}.

\vspace{0.cm}\begin{center}{\color{blue}\fbox{\color{black}\parbox{16cm}{
\begin{problem}
\label{problem:Gutzwiller}Does there exists a sequence $\hbar_{j}>0$,
$\hbar_{j}\rightarrow0$ with $j\rightarrow\infty$, such that for
every $\hbar=\hbar_{j}$,
\begin{enumerate}
\item there exists a space $\mathcal{H}_{\hbar}$ of \textbf{finite dimension},
an operator $\hat{\mathcal{F}}_{\hbar}:\mathcal{H}_{\hbar}\rightarrow\mathcal{H}_{\hbar}$
which is \textbf{quasi unitary} in the sense that there exists $\varepsilon_{\hbar}\geq0$
with $\varepsilon_{\hbar_{j}}\rightarrow0$, with $j\rightarrow\infty$
and 
\begin{equation}
\forall u\in\mathcal{H}_{\hbar},\left(1-\varepsilon_{\hbar}\right)\left\Vert u\right\Vert \leq\left\Vert \hat{\mathcal{F}}_{\hbar}u\right\Vert \leq\left(1+\varepsilon_{\hbar}\right)\left\Vert u\right\Vert \label{eq:quasi-unitary}
\end{equation}

\item The operator $\hat{\mathcal{F}}_{\hbar}$ satisfies the \textbf{asymptotic
Gutzwiller Trace formula} for large time; i.e. there exists $0<\theta<1$
independent on $\hbar$ and some $C_{\hbar}>0$ which may depend on
$\hbar$, such that for $\hbar$ small enough (such that $\theta<1-\varepsilon_{\hbar}$):
\begin{equation}
\forall n\in\mathbb{N},\quad\left|\mbox{Tr}\left(\hat{\mathcal{F}}_{\hbar}^{n}\right)-\sum_{x=f^{n}\left(x\right)}\frac{e^{iS_{x,n}/\hbar}}{\sqrt{\left|\mbox{det}\left(1-Df_{x}^{n}\right)\right|}}\right|\leq C_{\hbar}\theta^{n}\label{eq:Gutz_Trace_Formula}
\end{equation}

\end{enumerate}
\end{problem}
}}}\end{center}\vspace{0.cm}

Let us notice first that Theorem \ref{Th1.5} (for the case $V=V_{0}$)
 provides a solution to Problem \ref{problem:Gutzwiller}: this is
the quantum operator $\hat{\mathcal{F}}_{h}:\mathcal{H}_{h}\rightarrow\mathcal{H}_{h}$
defined in (\ref{eq:def_quantum_op}) obtained with the choice of
potential $\tilde{V}=\tilde{V}_{0}$, Eq.(\eqref{eq:V_tilde_0}),
giving $V=V_{0}$. Indeed (\ref{eq:quasi-unitary}) holds true from
Corollary \ref{cor:The-special-choice_V0} and (\ref{eq:Gutz_Trace_Formula})
holds true from (\ref{eq:Gutz_formula_large_time}) and because $\theta:=r_{1}^{+}+\varepsilon<1$.

Some importance of the Gutzwiller trace formula (\ref{eq:Gutz_Trace_Formula})
comes from the following property which shows uniqueness of the solution
to the problem:

\vspace{0.cm}\begin{center}{\color{blue}\fbox{\color{black}\parbox{16cm}{
\begin{prop}
\label{prop:unicity-of-Gutz}If $\hat{\mathcal{F}}_{\hbar}:\mathcal{H}_{\hbar}\rightarrow\mathcal{H}_{\hbar}$
is a solution of Problem \ref{problem:Gutzwiller} then the spectrum
of $\hat{\mathcal{F}}_{\hbar}$ is \textbf{uniquely defined} (with
multiplicities). In particular $\mbox{dim}\left(\mathcal{H}_{\hbar}\right)$
is uniquely defined.
\end{prop}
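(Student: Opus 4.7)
The plan is to extract the eigenvalues (with algebraic multiplicities) of $\hat{\mathcal{F}}_{\hbar}$ from the characteristic polynomial $d_{\hbar}(z) := \det(1 - z\hat{\mathcal{F}}_{\hbar})$ via its logarithmic derivative and to show that this polynomial is determined, on the disk $\{|z| < 1/\theta\}$, by the sequence of periodic-orbit sums that appears on the right of (\ref{eq:Gutz_Trace_Formula}). Suppose that $\hat{\mathcal{F}}_{\hbar}$ and $\hat{\mathcal{F}}_{\hbar}'$ are two solutions of Problem \ref{problem:Gutzwiller}, with associated characteristic polynomials $d_{\hbar}$ and $d_{\hbar}'$. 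The first step will be to note that the quasi-unitarity bound (\ref{eq:quasi-unitary}) forces every eigenvalue $\lambda$ of $\hat{\mathcal{F}}_{\hbar}$ (and similarly of $\hat{\mathcal{F}}_{\hbar}'$) to satisfy $1 - \varepsilon_{\hbar} \le |\lambda| \le 1 + \varepsilon_{\hbar}$, so that the zeros of $d_{\hbar}$ and $d_{\hbar}'$ all lie in the annulus $\{1/(1+\varepsilon_{\hbar}) \le |z| \le 1/(1-\varepsilon_{\hbar})\}$.

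Next, I will form the generating series
\begin{equation*}
\varphi(z) := \sum_{n\ge 1} \frac{z^n}{n}\left(\mathrm{Tr}\bigl(\hat{\mathcal{F}}_{\hbar}^{n}\bigr) - \mathrm{Tr}\bigl((\hat{\mathcal{F}}_{\hbar}')^{n}\bigr)\right).
\end{equation*}
By (\ref{eq:Gutz_Trace_Formula}) applied to each operator, the coefficient of $z^n$ is bounded by $2 C_{\hbar}\theta^n / n$, so $\varphi$ converges and is holomorphic on the disk $\{|z| < 1/\theta\}$, which by hypothesis strictly contains the annulus of zeros of $d_\hbar$ and $d_\hbar'$ (since $\theta < 1-\varepsilon_\hbar$). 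On a small disk near $0$, where both $d_\hbar$ and $d_\hbar'$ are nonvanishing, the standard identity $-\log d_\hbar(z) = \sum_{n\ge1} \frac{z^n}{n}\mathrm{Tr}(\hat{\mathcal{F}}_\hbar^n)$ (and analogously for $d_\hbar'$) yields $\varphi(z) = \log\bigl(d_\hbar'(z)/d_\hbar(z)\bigr)$.

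The conclusion will then follow from a rigidity argument: the rational function $d_\hbar'/d_\hbar$ coincides with $\exp(\varphi)$ on a neighborhood of $0$, and $\exp(\varphi)$ is holomorphic and nowhere vanishing on $\{|z|<1/\theta\}$; by analytic continuation, $d_\hbar'/d_\hbar$ must therefore be holomorphic and nowhere vanishing on this entire disk. Since all poles of this ratio are uncancelled zeros of $d_\hbar$ and all zeros are uncancelled zeros of $d_\hbar'$, and since every such zero lies in $\{|z| \le 1/(1-\varepsilon_\hbar)\} \subset \{|z|<1/\theta\}$, every zero must cancel. Thus $d_\hbar$ and $d_\hbar'$ have identical zeros with identical multiplicities, and since they are both monic of the same leading behavior (both equal $1 + O(z)$ at the origin), $d_\hbar = d_\hbar'$, which proves that the spectrum of $\hat{\mathcal{F}}_\hbar$ with algebraic multiplicities is uniquely determined.

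The only mild subtlety, and what I would double-check carefully, is the compatibility of the pinching condition $\theta < 1 - \varepsilon_\hbar$ used to place the eigenvalue annulus strictly inside the disk of holomorphy of $\varphi$; this is precisely the pinching hypothesis built into Problem \ref{problem:Gutzwiller}, so no additional assumption is needed. The remainder is formal manipulation with the logarithm of a polynomial and an application of the identity theorem for holomorphic functions.
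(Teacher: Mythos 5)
Your argument is correct and is essentially the same as the paper's: both form the generating series for the trace differences, use the bound (\ref{eq:Gutz_Trace_Formula}) to get convergence on $\{|z|<1/\theta\}$, identify this series near the origin with $\log\bigl(d_\hbar'/d_\hbar\bigr)$, and invoke the pinching $\theta<1-\varepsilon_\hbar$ to conclude that the ratio of characteristic polynomials is holomorphic and nonvanishing on a disk that contains all of their zeros. The paper's Lemma~\ref{lem:on Tra^n} packages the same ideas via two-sided bounds on $|d_A/d_B|$ rather than an explicit analytic-continuation argument, but the substance is identical.
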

}}}\end{center}\vspace{0.cm}
\begin{proof}
This is consequence of the following lemma.
\begin{lem}
\label{lem:on Tra^n}If $A,B$ are matrices and for any $n\in\mathbb{N}$,
$\left|\mbox{Tr}\left(A^{n}\right)-\mbox{Tr}\left(B^{n}\right)\right|<C\theta^{n}$
with some $C>0$, $\theta\geq0$ then $A$ and $B$ have the same
spectrum with same multiplicities on the spectral domain $\left|z\right|>\theta$.\end{lem}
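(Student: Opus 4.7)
The natural approach is via the Fredholm determinant (characteristic polynomial) of each matrix. Set
\[
d_A(z):=\det(1-zA),\qquad d_B(z):=\det(1-zB).
\]
These are polynomials whose zeros, counted with multiplicity, are exactly the reciprocals $1/\lambda$ of the nonzero eigenvalues $\lambda$ of $A$ (resp. $B$), with multiplicity equal to the algebraic multiplicity of $\lambda$. Thus proving that $A$ and $B$ have the same spectrum with multiplicities on $\{|z|>\theta\}$ is equivalent to proving that $d_A$ and $d_B$ have the same zeros with multiplicities on the disk $\{|z|<1/\theta\}$.

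The plan is to compare $d_A$ and $d_B$ through the standard logarithmic identity
\[
\log d_A(z)=\operatorname{Tr}\log(1-zA)=-\sum_{n\ge 1}\frac{z^n}{n}\operatorname{Tr}(A^n),
\]
valid for $|z|$ smaller than the inverse spectral radius of $A$, and similarly for $B$. Subtracting,
\[
\log\frac{d_A(z)}{d_B(z)}=-\sum_{n\ge 1}\frac{z^n}{n}\bigl(\operatorname{Tr}(A^n)-\operatorname{Tr}(B^n)\bigr).
\]
The hypothesis $|\operatorname{Tr}(A^n)-\operatorname{Tr}(B^n)|\le C\theta^n$ forces this series to converge absolutely and define a holomorphic function on the open disk $\{|z|<1/\theta\}$. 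Hence the meromorphic function $d_A/d_B$ admits a holomorphic and non-vanishing extension to $\{|z|<1/\theta\}$, so in this disk every zero of $d_A$ must be matched by a zero of $d_B$ of the same order, and vice versa. Translating back via $\lambda\leftrightarrow 1/\lambda$, this gives the claimed agreement of spectra with multiplicities on $\{|z|>\theta\}$.

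The main technical point to handle carefully is the passage from the formal identity for $\log\det$ to an actual identification of zero orders with eigenvalue multiplicities: one must work on a small neighbourhood of $0$ where $\log(1-zA)$ and $\log(1-zB)$ are genuinely defined (below the smallest $|1/\lambda|$), establish the identity for $\log(d_A/d_B)$ there, and then use analytic continuation of this logarithm to the whole disk $\{|z|<1/\theta\}$, where it remains single-valued because the series converges. This is the only delicate step; everything else is routine complex analysis. No pinching or additional assumption on $A$, $B$ is needed, which is why the conclusion depends only on the radius $\theta$ controlling the trace differences.
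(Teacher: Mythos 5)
Your proposal is essentially the same as the paper's argument: both pass to the characteristic function $d_A(z)=\det(1-zA)$, use the logarithmic identity $\log d_A(z)=-\sum_{n\ge1}\frac{z^n}{n}\mathrm{Tr}(A^n)$, and use the trace hypothesis to control $d_A/d_B$ on the disk $|z|<1/\theta$. The only difference is presentational: you make the analytic continuation step explicit and conclude that $\log(d_A/d_B)$ extends holomorphically, whereas the paper derives the two-sided bound $1/\mathcal{B}\le|d_A/d_B|\le\mathcal{B}$ directly; the substance is identical.
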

\begin{proof}[Proof of Lemma \ref{lem:on Tra^n}.]
 From the formula%
\footnote{This formula is easily proved by using eigenvalues $\lambda_{j}$
of $A$ and the Taylor series of $\log\left(1-x\right)=-\sum_{n\geq1}\frac{x^{n}}{n}$
which converges for $\left|x\right|<1$:
\begin{eqnarray*}
\mbox{det}\left(1-\mu A\right) & = & \prod_{j}\left(1-\mu\lambda_{j}\right)=\exp\left(\sum_{j}\log\left(1-\mu\lambda_{j}\right)\right)\\
 & = & \exp\left(-\sum_{j}\sum_{n\geq1}\frac{\left(\mu\lambda_{j}\right)^{n}}{n}\right)=\exp\left(-\sum_{n\geq1}\frac{\mu^{n}}{n}\mbox{Tr}\left(A^{n}\right)\right)
\end{eqnarray*}
}:
\[
\mbox{det}\left(1-\mu A\right)=\exp\left(-\sum_{n\geq1}\frac{\mu^{n}}{n}\mbox{Tr}\left(A^{n}\right)\right)
\]
The sum on the right is convergent if $1/\left|\mu\right|>\left\Vert A\right\Vert $.
Notice that we have (with multiplicities): $\mu$ is a zero of $d_{A}\left(\mu\right)=\mbox{det}\left(1-\mu A\right)$
if and only if $z=\frac{1}{\mu}$ is a (generalized) eigenvalue of
$A$. Using the formula we get that if $1/\left|\mu\right|>\theta$
then
\begin{align*}
\left|\frac{\det\left(1-\mu A\right)}{\det\left(1-\mu B\right)}\right| & \leq\exp\left(\sum_{n\geq1}\frac{\left|\mu\right|^{n}}{n}\left|\mbox{Tr}\left(A^{n}\right)-\mbox{Tr}\left(B^{n}\right)\right|\right)\\
 & <\exp\left(C\sum_{n\geq1}\frac{\left(\left|\mu\right|\theta\right)^{n}}{n}\right)=\left(1-\theta\left|\mu\right|\right)^{-C}=:\mathcal{B}
\end{align*}
Similarly $\left|\frac{\det\left(1-\mu A\right)}{\det\left(1-\mu B\right)}\right|>\frac{1}{\mathcal{B}}$,
hence $d_{A}\left(\mu\right)$ and $d_{B}\left(\mu\right)$ have the
same zeroes on $1/\left|\mu\right|>\theta$. Equivalently $A$ and
$B$ have the same spectrum on $\left|z\right|>\theta$.
\end{proof}
If $\hat{G}_{\hbar}$ is another solution of the problem \ref{problem:Gutzwiller}
then (\ref{eq:Gutz_Trace_Formula}) implies that $\left|\mbox{Tr}\left(\hat{\mathcal{F}}_{\hbar}^{n}\right)-\mbox{Tr}\left(\hat{G}_{\hbar}^{n}\right)\right|\leq2C\theta^{n}$
and Lemma \ref{lem:on Tra^n} tells us that $\hat{G}_{\hbar}$ and
$\hat{\mathcal{F}}_{\hbar}$ have the same spectrum on $\left|z\right|>\theta$.
But by hypothesis (\ref{eq:quasi-unitary}) their spectrum is in $\left|z\right|>1-\varepsilon_{\hbar}>\theta$.
Therefore all their spectrum coincides. This finishes the proof of
Proposition \ref{prop:unicity-of-Gutz}.\end{proof}
\begin{rem}
Previous results in the literature concerning the ``semiclassical
Gutzwiller formula'' for ``quantum maps'' do not provide an answer
to the problem \ref{problem:Gutzwiller} above. We explain why. For
any reasonable quantization of the Anosov map $f:M\rightarrow M$,
e.g. the Weyl quantization or geometric quantization, one obtains
a family of unitary operators $\hat{\mathcal{F}}_{\hbar}:\mathcal{H}_{\hbar}\rightarrow\mathcal{H}_{\hbar}$
acting in some finite dimensional (family of) Hilbert spaces. So this
answer to (\ref{eq:quasi-unitary}). Using semiclassical analysis
it is possible to show a Gutzwiller formula like (\ref{eq:Gutz_Trace_Formula})
but with an error term on the right hand side of the form $\mathcal{O}\left(\hbar\theta^{n}\right)$
with $\theta=e^{h_{0}/2}>1$ where $h_{0}>0$ is the topological entropy
which represents the exponential growing number of periodic orbits
(\cite{fred-trace-06} and references therein). Using more refined
semiclassical analysis at higher orders, the error can be made 
\begin{equation}
\mathcal{O}\left(\hbar^{M}\theta^{n}\right)\label{eq:error_term}
\end{equation}
 with any $M>0$ \cite{fred-trace-06}, but nevertheless one has a
total error which gets large after the so-called Ehrenfest time: $n\gg M\frac{\log\left(1/\hbar\right)}{\lambda_{0}}$.
So all these results obtained from any quantization scheme do not
provide an answer to the problem \ref{problem:Gutzwiller}. We may
regard the operator in (\ref{eq:def_quantum_op}) as the only ``quantization
procedure'' for which (\ref{eq:Gutz_Trace_Formula}) holds true.
For that reason we may call it a \textbf{natural quantization} of
the Anosov map $f$.\end{rem}

\subsection{\label{sub:Emergence-of-quantum}Dynamical correlation functions
and emergence of quantum dynamics}

As explained in \cite[cor. 1.3]{fred-roy-sjostrand-07} for example,
the Ruelle-Pollicott spectrum of the transfer operator $\hat{F}$
has an important meaning in terms of time evolution of correlation
functions. If $u,v\in C^{\infty}\left(P\right)$ the time correlation
function is defined by 
\[
C_{v,u}\left(n\right):=\int_{P}\overline{v\left(p\right)}\left(\hat{F}^{n}u\right)\left(p\right)d\mu_{P}=\left(v,\hat{F}^{n}u\right){}_{L^{2}\left(P\right)}.
\]
In this Section we show that $C_{v,u}\left(n\right)$ can be expressed
as an asymptotic over the Ruelle resonances, up to exponentially small
error term. In particular we emphasize the role of the external band
in the spectrum as a manifestation of ``quantum behavior'' in the
fluctuations of the correlation functions. In this subsection and
the next, we consider the transfer operators on $P$ (not the Grassmanian
extension $P_{G}$).

\subsubsection{Use of the resolvent estimate for dynamical control}

Before giving results about dynamical correlation functions, we give
a proposition which expresses differently (but equivalently) the estimate
(\ref{eq:bound_resolvent-1}) about the resolvent. Recall from (\ref{eq:def_quantum_op})
that if $r_{1}^{+}<r_{0}^{+}$, then for $N$ large enough, the transfer
operator has a spectral decomposition $\widetilde{F}_{N}=\hat{\mathcal{F}}_{\hbar}+\left(\hat{F}_{N}-\hat{\mathcal{F}}_{\hbar}\right)$
into a finite rank operator $\hat{\mathcal{F}}_{\hbar}$ for the external
band and $\left(\hat{F}_{N}-\hat{\mathcal{F}}_{\hbar}\right)$ is
for the internal structure.

\begin{center}{\color{blue}\fbox{\color{black}\parbox{16cm}{
\begin{prop}
\label{prop:bound_on_F^n}For any $\varepsilon>0$, there exists $C_{\varepsilon}>0$
and $N_{\varepsilon}\geq1$ such that for any $N\geq N_{\varepsilon}$
and for any $n\geq0$,
\begin{equation}
\left\Vert \hat{F}_{N}^{n}\right\Vert \leq C_{\varepsilon}\left(r_{0}^{+}+\varepsilon\right)^{n}\label{eq:bound_F_N^n}
\end{equation}
Moreover if $r_{1}^{+}<r_{0}^{-}$ then
\begin{equation}
\left\Vert \hat{\mathcal{F}}_{\hbar}^{n}\right\Vert _{\max}\leq C_{\varepsilon}\left(r_{0}^{+}+\varepsilon\right)^{n},\quad\left\Vert \hat{\mathcal{F}}_{\hbar}^{n}\right\Vert _{\mathrm{min}}\geq\frac{1}{C_{\varepsilon}}\left(r_{0}^{-}-\varepsilon\right)^{n}\label{eq:bound_F_h^n}
\end{equation}
and
\begin{equation}
\left\Vert \left(\hat{F}_{N}-\hat{\mathcal{F}}_{\hbar}\right)^{n}\right\Vert \leq C_{\varepsilon}\left(r_{1}^{+}+\varepsilon\right)^{n}.\label{eq:bound_F1^n}
\end{equation}

\end{prop}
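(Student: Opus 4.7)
The plan is to derive all three estimates by Dunford--Riesz functional calculus, integrating the resolvent along contours placed in the spectral gaps supplied by Theorem \ref{thm:band_structure}, where (\ref{eq:bound_resolvent}) provides the crucial resolvent bound uniform in $N$.

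For (\ref{eq:bound_F_N^n}), I would first fix an auxiliary $\varepsilon'<\varepsilon$. By (\ref{eq:spectrum_in_annuli}), for $N\geq N_{\varepsilon'}$ the whole Ruelle spectrum of $\hat F_{N}$ lies in the disk $\{|z|\leq r_{0}^{+}+\varepsilon'\}$, while (\ref{eq:bound_resolvent}) yields $\|(z-\hat F_{N})^{-1}\|\leq C_{\varepsilon'}$ uniformly in $N$ on the circle $\Gamma_{0}:=\{|z|=r_{0}^{+}+\varepsilon\}$. The Cauchy--Dunford representation
\begin{equation*}
\hat F_{N}^{n}\;=\;\frac{1}{2\pi i}\oint_{\Gamma_{0}} z^{n}\,(z-\hat F_{N})^{-1}\,dz
\end{equation*}
then gives (\ref{eq:bound_F_N^n}) after a trivial estimate of the integrand, the length of $\Gamma_{0}$ being absorbed into the constant.

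For (\ref{eq:bound_F_h^n}) I would invoke the gap hypothesis $r_{1}^{+}<r_{0}^{-}$: the spectral projector $\Pi_{\hbar}$ splits $\hat F_{N}$ as $\hat{\mathcal F}_{\hbar}\oplus \hat F_{N}^{\mathrm{int}}$ along $\mathcal H_{N}^{r}(P)=\mathcal H_{\hbar}\oplus\ker\Pi_{\hbar}$, and the resolvent decomposes accordingly, so the uniform bound (\ref{eq:bound_resolvent}) in any gap descends to the resolvents of each summand on its invariant subspace. The upper bound $\|\hat{\mathcal F}_{\hbar}^{n}\|_{\max}$ is then obtained exactly as in the previous paragraph. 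For the lower bound, $\hat{\mathcal F}_{\hbar}$ is invertible on the finite-dimensional space $\mathcal H_{\hbar}$ because its spectrum lies in the annulus $\{r_{0}^{-}-\varepsilon\leq|z|\leq r_{0}^{+}+\varepsilon\}$, and
\begin{equation*}
\hat{\mathcal F}_{\hbar}^{-n}\;=\;\frac{1}{2\pi i}\Bigl(\oint_{\Gamma_{0}}-\oint_{\Gamma_{-}}\Bigr) z^{-n}\,(z-\hat{\mathcal F}_{\hbar})^{-1}\,dz,\qquad \Gamma_{-}:=\{|z|=r_{0}^{-}-\varepsilon\},
\end{equation*}
where both contours are oriented counter-clockwise. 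The term on $\Gamma_{-}$ dominates and produces $\|\hat{\mathcal F}_{\hbar}^{-n}\|\leq C_{\varepsilon}(r_{0}^{-}-\varepsilon)^{-n}$; taking reciprocals yields the claimed lower bound on $\|\hat{\mathcal F}_{\hbar}^{n}\|_{\min}$, up to a cosmetic adjustment of $\varepsilon$.

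Finally, (\ref{eq:bound_F1^n}) follows by noting the algebraic identity $\hat F_{N}-\hat{\mathcal F}_{\hbar}=(I-\Pi_{\hbar})\hat F_{N}=\hat F_{N}(I-\Pi_{\hbar})$, which shows that $(\hat F_{N}-\hat{\mathcal F}_{\hbar})^{n}$ is just $(\hat F_{N}^{\mathrm{int}})^{n}$ on $\ker\Pi_{\hbar}$ and zero on $\mathcal H_{\hbar}$. The spectrum of $\hat F_{N}^{\mathrm{int}}$ lies in $\{|z|\leq r_{1}^{+}+\varepsilon'\}$ by Theorem \ref{thm:band_structure}(1), and (\ref{eq:bound_resolvent}) on the contour $\{|z|=r_{1}^{+}+\varepsilon\}$ (which sits in a gap by hypothesis) provides a uniform resolvent bound, so a third Cauchy--Dunford integration closes the argument. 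The entire proof is a routine passage from resolvent estimates to estimates on iterates; no serious obstacle appears, the only point of care being that the contour radius $r_{k}^{\pm}\pm\varepsilon$ must be kept strictly outside the $\varepsilon'$-neighborhood where the spectrum is confined, which is exactly the region in which (\ref{eq:bound_resolvent}) offers its $N$-uniform control.
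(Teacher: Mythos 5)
Your proposal follows the same route as the paper: Dunford--Cauchy contour integrals combined with the $N$-uniform resolvent bound (\ref{eq:bound_resolvent}) from Theorem \ref{thm:band_structure}. The only cosmetic differences are that the paper works throughout with the resolvent of the full operator $\hat F_N$ along a single annulus contour $\gamma$ (outer circle $|z|=r_0^++\varepsilon$ oriented positively, inner circle $|z|=r_0^--\varepsilon$ oriented negatively), while you split that contour into $\Gamma_0$ and $\Gamma_-$ and sometimes write the resolvent of the restricted operator $\hat{\mathcal F}_\hbar$; since the resolvent of the restriction to the invariant subspace $\mathcal H_\hbar$ is just the restriction of $(z-\hat F_N)^{-1}$, the two presentations are equivalent. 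The lower bound via $\hat{\mathcal F}_\hbar^{-n}$, the single-circle argument for (\ref{eq:bound_F_N^n}), and the circle at $r_1^++\varepsilon$ for (\ref{eq:bound_F1^n}) are exactly what the paper does, so the proposal is correct.
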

}}}\end{center}
\begin{rem}
More generally, if there is some internal isolated band $k$, i.e.
if $r_{k+1}^{+}<r_{k}^{-}$,$r_{k}^{+}<r_{k-1}^{-}$ for some $k\geq1$,
then we can consider the corresponding spectral decomposition $\hat{F}_{N}=\ldots+\hat{\mathcal{F}}_{k,\hbar}+\ldots$
isolating some finite rank operator $\hat{\mathcal{F}}_{k,\hbar}$
and similarly we can show that (\ref{eq:bound_resolvent}) is equivalent
to 
\[
\frac{1}{C_{\varepsilon}}\left(r_{k}^{-}-\varepsilon\right)^{n}\le\left\Vert \hat{\mathcal{F}}_{k,\hbar}^{n}\right\Vert _{\mathrm{min}}\le\left\Vert \hat{\mathcal{F}}_{k,\hbar}^{n}\right\Vert _{\max}\leq C_{\varepsilon}\left(r_{k}^{+}+\varepsilon\right)^{n}.
\]
\end{rem}
\begin{proof}
Let $\gamma$ be the closed path in $\mathbb{C}$ made by the union
of the circle of radius $r_{0}^{+}+\varepsilon$ in the direct sense
and the circle of radius $r_{0}^{-}-\varepsilon$ in the indirect
sense. From Cauchy formula one has
\[
\hat{\mathcal{F}}_{\hbar}^{n}=\frac{1}{2\pi i}\oint_{\gamma}z^{n}\left(z-\hat{F}_{N}\right)^{-1}dz
\]
which implies
\[
\left\Vert \hat{\mathcal{F}}_{\hbar}^{n}\right\Vert _{\max}\leq\left(r_{0}^{+}+\varepsilon\right)^{n+1}\max_{z\in\gamma}\left\Vert \left(z-\hat{F}_{N}\right)^{-1}\right\Vert 
\]
Then the uniform bound on the resolvent (\ref{eq:bound_resolvent-1})
implies 
\[
\left\Vert \hat{\mathcal{F}}_{\hbar}^{n}\right\Vert \leq C_{\varepsilon}\left(r_{0}^{+}+\varepsilon\right)^{n}
\]
which is the first equation of (\ref{eq:bound_F_h^n}). Reversing
the sign of $n$ one gets 
\[
\left\Vert \hat{\mathcal{F}}_{\hbar}^{-n}\right\Vert \leq C_{\varepsilon}\left(r_{0}^{-}-\varepsilon\right)^{-n}.
\]
Hence
\[
\left\Vert \hat{\mathcal{F}}_{\hbar}^{n}\right\Vert _{\mathrm{min}}=\left\Vert \hat{\mathcal{F}}_{\hbar}^{-n}\right\Vert ^{-1}\geq\frac{1}{C_{\varepsilon}}\left(r_{0}^{-}-\varepsilon\right)^{n}
\]
which is the second equation of (\ref{eq:bound_F_h^n}). The bound
(\ref{eq:bound_F_N^n}) is obtained similarly but more simply with
the closed path $\gamma$ being only the circle of radius $r_{0}^{+}+\varepsilon$
in the direct sense. The estimate (\ref{eq:bound_F1^n}) is obtained
with the closed path $\gamma$ being the circle of radius $r_{1}^{+}+\varepsilon$
in the direct sense.
\end{proof}

\subsubsection{Decay of correlations expressed with the quantum operator}

We first introduce a notation: for a given $N\in\mathbb{Z}$, we have
seen that the prequantum transfer operator $\hat{F}_{N}$ has a discrete
spectrum of resonances. For $\rho>0$ such that there is no eigenvalue
on the circle $\left|z\right|=\rho$ for any $N$, we denote by $\Pi_{\rho,N}$
the projector on the Fourier space of mode $N$ composed with the
spectral projector of the operator $\hat{F}_{N}$ on the domain $\left\{ z\in\mathbb{C},\left|z\right|>\rho\right\} $.
This is a finite rank operator (which obviously commutes with $\hat{F}_{N}$).
Recall that $\hbar=\frac{1}{2\pi N}$.

\begin{center}{\color{blue}\fbox{\color{black}\parbox{16cm}{
\begin{thm}
\textbf{\label{Th1.7}}Suppose that $r_{1}^{+}<r_{0}^{-}$ (i.e. the
external band is isolated). For any $\varepsilon>0$, there exists
$N_{\varepsilon}\geq1$ such that for any $N\geq N_{\varepsilon}$,
for any $u,v\in C^{\infty}\left(P\right)$ such that $\mathrm{supp}\left(u\right)\subset K_{0}$,
and for $n\rightarrow\infty$, one has
\begin{equation}
\underbrace{\left(v,\hat{F}^{n}u\right)_{L^{2}}}_{\mathrm{"classical"}}=\sum_{\left|N\right|\leq N_{\varepsilon}}\left(v_{N},\left(\hat{F}_{N}\Pi_{\rho,N}\right)^{n}u_{N}\right)+\sum_{\left|N\right|>N_{\varepsilon}}\underbrace{\left(v_{N},\hat{\mathcal{F}}_{\hbar}^{n}u_{N}\right)}_{"\mathrm{quantum"}}+O\left(\left(r_{1}^{+}+\varepsilon\right)^{n}\right)\label{eq:correlations_classical-quantum}
\end{equation}
with $\rho=r_{1}^{+}+\varepsilon$ and where $u_{N},v_{N}\in C_{N}^{\infty}\left(P\right)$
are the Fourier components of the functions $u$ and $v$. In the
right hand side of (\ref{eq:correlations_classical-quantum}), the
first sum is a finite sum and involves finite rank operators. The
second sum is an infinite but convergent sum.
\end{thm}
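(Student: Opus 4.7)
My plan is to reduce everything to a mode-by-mode analysis using the $\mathbf{U}(1)$-Fourier decomposition in the fiber direction of $P$. Since $u,v\in C^{\infty}\left(P\right)$, decompose $u=\sum_{N\in\mathbb{Z}}u_{N}$ and $v=\sum_{N\in\mathbb{Z}}v_{N}$ with $u_{N},v_{N}\in C_{N}^{\infty}\left(P\right)$; smoothness gives rapid (super-polynomial) decay of these Fourier components in $N$, in any fixed Sobolev norm. Because $\tilde{f}$ is equivariant (\ref{eq:equivariance_f_tilde}), $\hat{F}$ commutes with the $\mathbf{U}(1)$-action, so the Fourier modes are orthogonal in $L^{2}(P)$ and
\[
\left(v,\hat{F}^{n}u\right)_{L^{2}(P)}=\sum_{N\in\mathbb{Z}}\left(v_{N},\hat{F}_{N}^{n}u_{N}\right)_{L^{2}}.
\]
The task is then to control each inner product and to sum the resulting remainders.

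For $\left|N\right|>N_{\varepsilon}$ large, I would invoke Theorem \ref{thm:band_structure} to get the band structure of $\hat{F}_{N}$ on $\mathcal{H}_{N}^{r}\left(P\right)$ with external annulus $\mathcal{A}_{0}$ isolated from the interior spectrum by the gap $r_{1}^{+}<r_{0}^{-}$. Proposition \ref{prop:bound_on_F^n} then supplies the key estimate $\left\Vert \hat{F}_{N}^{n}-\hat{\mathcal{F}}_{\hbar}^{n}\right\Vert \leq C_{\varepsilon}(r_{1}^{+}+\varepsilon)^{n}$ with constant $C_{\varepsilon}$ uniform in $N$, producing the ``quantum'' term in (\ref{eq:correlations_classical-quantum}) modulo the advertised remainder. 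For $\left|N\right|\leq N_{\varepsilon}$, one instead uses only Theorem \ref{thm:Discrete-spectrum}: choosing $r$ so large that $\varepsilon_{r}=\lambda^{-r}\max e^{V}<\rho:=r_{1}^{+}+\varepsilon$, the spectral projector $\Pi_{\rho,N}$ isolates the finitely many Ruelle resonances of $\hat{F}_{N}$ with $\left|z\right|>\rho$, and the complementary piece has spectral radius at most $\rho$. A Cauchy integral of the resolvent along $\left|z\right|=\rho$ then gives $(v_{N},\hat{F}_{N}^{n}u_{N})=(v_{N},(\hat{F}_{N}\Pi_{\rho,N})^{n}u_{N})+O\left(\rho^{n}\right)$ with a constant depending on $N$ but harmlessly so, since this is only a finite sum of $2N_{\varepsilon}+1$ modes.

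The main obstacle is controlling the tail $\sum_{\left|N\right|>N_{\varepsilon}}$ of the remainders, which must collectively be $O((r_{1}^{+}+\varepsilon)^{n})$ rather than growing to a divergent series. The individual remainder from mode $N$ is bounded by $C_{\varepsilon}(r_{1}^{+}+\varepsilon)^{n}\left\Vert v_{N}\right\Vert _{(\mathcal{H}_{N}^{r})^{*}}\left\Vert u_{N}\right\Vert _{\mathcal{H}_{N}^{r}}$, and the anisotropic Sobolev norms $\mathcal{H}_{N}^{r}\left(P\right)$ genuinely depend on $N$; the core technical point, to be established using the definition of these spaces (and the localization of $\mathrm{supp}(u)$ inside $K_{0}$ so that the truncation framework applies), is that the embeddings $C^{\infty}\left(P\right)\to\mathcal{H}_{N}^{r}\left(P\right)$ and $C^{\infty}\left(P\right)\to(\mathcal{H}_{N}^{r}(P))^{*}$ have operator norms that grow at most polynomially in $N$ on any fixed $C^{k}$ class. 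Combined with the super-polynomial decay of $\left\Vert u_{N}\right\Vert _{C^{k}}$ and $\left\Vert v_{N}\right\Vert _{C^{k}}$ from smoothness of $u$ and $v$, this yields $\sum_{N}\left\Vert u_{N}\right\Vert _{\mathcal{H}_{N}^{r}}\left\Vert v_{N}\right\Vert _{(\mathcal{H}_{N}^{r})^{*}}<\infty$, so the summed remainder is absorbed in $O((r_{1}^{+}+\varepsilon)^{n})$. The convergence of $\sum_{\left|N\right|>N_{\varepsilon}}\left(v_{N},\hat{\mathcal{F}}_{\hbar}^{n}u_{N}\right)$ follows from the analogous polynomial-in-$N$ bound $\left\Vert \hat{\mathcal{F}}_{\hbar}^{n}\right\Vert \leq C_{\varepsilon}(r_{0}^{+}+\varepsilon)^{n}$ of Proposition \ref{prop:bound_on_F^n} together with the rapid decay of the Fourier components.
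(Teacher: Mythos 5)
Your proposal is correct and follows essentially the same route as the paper: Fourier decomposition into $\mathbf{U}(1)$-modes, the spectral projector $\Pi_{\rho,N}$ for the finitely many low modes, the spectral decomposition $\hat{F}_{N}=\hat{\mathcal{F}}_{\hbar}+(\hat{F}_{N}-\hat{\mathcal{F}}_{\hbar})$ together with the bound (\ref{eq:bound_F1^n}) for high modes, and summability of the tail via $\|u_{N}\|_{\mathcal{H}_{N}^{r}},\|v_{N}\|_{(\mathcal{H}_{N}^{r})'}=O(N^{-\infty})$. The polynomial-in-$N$ control of the embedding norms that you flag as the main technical point is exactly what the paper invokes (in a short remark after the theorem, citing the polynomial weight in the definition of $\mathcal{H}_{N}^{r}$), so the proofs match step for step.
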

}}}\end{center}
\begin{rem}
The asymptotic formula (\ref{eq:correlations_classical-quantum})
has a nice interpretation: the classical correlation functions $\left(v,\hat{F}^{n}u\right)$
are governed by the quantum correlation functions $\left(v_{N},\hat{\mathcal{F}}_{\hbar}^{n}u_{N}\right)$
for large time (up to the first finite rank expression), or equivalently
the \textbf{``quantum dynamics emerge dynamically from the classical
dynamics''}.
\end{rem}
~
\begin{rem}
From (\ref{eq:bound_F_h^n}) one has that 
\[
\left|\left(v_{N},\hat{\mathcal{F}}_{\hbar}^{n}u_{N}\right)\right|\leq\left\Vert u_{N}\right\Vert _{\left(\mathcal{H}_{N}^{r}\right)'}\left\Vert v_{N}\right\Vert _{\mathcal{H}_{N}^{r}}\left\Vert \hat{\mathcal{F}}_{\hbar}^{n}\right\Vert _{\mathcal{H}_{N}^{r}}\leq O\left(N^{-\infty}\right)O\left(\left(r_{0}^{+}+\varepsilon\right)^{n}\right)
\]
uniformly in $n$ and $N$. We have used the fact that for smooth
functions one has fast decay in $N$: $\left\Vert u_{N}\right\Vert ,\left\Vert v_{N}\right\Vert =O\left(N^{-\infty}\right)$.
(This is also because the weight in the space $\mathcal{H}_{N}^{r}$
is polynomially bounded in frequencies $\xi$).
\end{rem}
~
\begin{rem}
It is known that for $n\rightarrow\infty$,
\[
\left(v,\hat{F}^{n}u\right)=\lambda_{0}^{n}\left(v,\Pi_{\lambda_{0}}u\right)+O\left(\left|\lambda_{1}\right|^{n}\right)
\]
where $\lambda_{0}>0$ is the leading and simple eigenvalue of $\widetilde{F}$
(in the space $\mathcal{H}_{N=0}^{r})$ and $\lambda_{1}$ is the
second eigenvalue with $\left|\lambda_{1}\right|<\lambda_{0}$. The
case $V=0$ for which $\lambda_{0}=1$ gives that the map $\tilde{f}:P\rightarrow P$
is mixing with exponential decay of correlations.\end{rem}
\begin{proof}
Let $N_{\varepsilon}$ given by Theorem \ref{thm:Discrete-spectrum}
or Proposition \ref{prop:bound_on_F^n} and write

\begin{equation}
\left(v,\hat{F}^{n}u\right)=\sum_{\left|N\right|\leq N_{\varepsilon}}\left(v_{N},\left(\hat{F}_{N}\Pi_{\rho,N}\right)^{n}u_{N}\right)+\sum_{\left|N\right|\leq N_{\varepsilon}}\left(v_{N},\left(\hat{F}_{N}\left(1-\Pi_{\rho,N}\right)\right)^{n}u_{N}\right)+\sum_{\left|N\right|>N_{\varepsilon}}\left(v_{N},\hat{F}_{N}^{n}u_{N}\right)\label{eq:sum-1}
\end{equation}
Let us consider the second term on the right hand side. From definition
of $\Pi_{\rho,N}$ one has $r_{s}\left(\hat{F}_{N}\left(1-\Pi_{\rho,N}\right)\right)\leq\left(r_{0}^{+}+\varepsilon\right)$
so for the finite number of terms $\left|N\right|\leq N_{\varepsilon}$
one has$\left\Vert \left(\hat{F}_{N}\left(1-\Pi_{\rho,N}\right)\right)^{n}\right\Vert \leq C_{\varepsilon}\left(r_{0}^{+}+\varepsilon\right)^{n}$
and we deduce the estimate 
\[
\left|\sum_{\left|N\right|\leq N_{\varepsilon}}\left(v_{N},\left(\hat{F}_{N}\left(1-\Pi_{\rho,N}\right)\right)^{n}u_{N}\right)\right|=O\left(\left(r_{0}^{+}+\varepsilon\right)^{n}\right).
\]

As in Proposition \ref{prop:bound_on_F^n} we consider the spectral
decomposition $\hat{F}_{N}=\hat{\mathcal{F}}_{\hbar}+\left(\hat{F}_{N}-\hat{\mathcal{F}}_{\hbar}\right)$
and decompose accordingly the last term of (\ref{eq:sum-1}) as
\[
\sum_{\left|N\right|>N_{\varepsilon}}\left(v_{N},\hat{F}_{N}^{n}u_{N}\right)=\sum_{\left|N\right|>N_{\varepsilon}}\left(v_{N},\hat{\mathcal{F}}_{\hbar}^{n}u_{N}\right)+\sum_{\left|N\right|>N_{\varepsilon}}\left(v_{N},\left(\hat{F}_{N}-\hat{\mathcal{F}}_{\hbar}\right)^{n}u_{N}\right)
\]
From (\ref{eq:bound_F1^n}), one has then 
\begin{align*}
\left|\left(v_{N},\left(\hat{F}_{N}-\hat{\mathcal{F}}_{\hbar}\right)^{n}u_{N}\right)\right| & \leq\left\Vert u_{N}\right\Vert _{\left(\mathcal{H}_{N}^{r}\right)'}\left\Vert v_{N}\right\Vert _{\mathcal{H}_{N}^{r}}\left\Vert \left(\hat{F}_{N}-\hat{\mathcal{F}}_{\hbar}\right)^{n}\right\Vert _{\mathcal{H}_{N}^{r}}\\
 & \leq O\left(N^{-\infty}\right)O\left(\left(r_{1}^{+}+\varepsilon\right)^{n}\right)
\end{align*}
This implies that $\left|\sum_{\left|N\right|>N_{\varepsilon}}\left(v_{N},\left(\hat{F}_{N}-\hat{\mathcal{F}}_{\hbar}\right)^{n}u_{N}\right)\right|=O\left(\left(r_{1}^{+}+\varepsilon\right)^{n}\right)$
and we get (\ref{eq:correlations_classical-quantum}).
\end{proof}

\subsection{\label{sub:1.7Semiclassical-calculus-on}Semiclassical calculus on
the quantum space}

In Definition \ref{def:quantum_operator} we have defined the quantum
space $\mathcal{H}_{\hbar}$ for every $\hbar=\frac{1}{2\pi N}$ (small
enough), the quantum operator $\hat{\mathcal{F}}_{\hbar}:\mathcal{H}_{\hbar}\rightarrow\mathcal{H}_{\hbar}$
and the finite rank spectral projector $\Pi_{\hbar}:\mathcal{H}_{N}^{r}\left(P\right)\rightarrow\mathcal{H}_{\hbar}$.
In this section we introduce the definition of ``quantization of
symbols'' on this quantum space and give some properties of them.
In semiclassical analysis these properties are considered as ``standard''
or ``basic properties'' for defining ``a good semiclassical calculus''.
We will comment on them at the end of the Section. Beware that the
quantum space $\mathcal{H}_{\hbar}$ defined here depends on the given
Anosov diffeomorphism $f:M\rightarrow M$. 
\begin{rem}
The results below extend readily to the case of Grassmanian extension
considered in Subsection \ref{sub:Spectral-results-with-Grassman}
and \ref{sub:1.5Gutzwiller-trace-formula}.
\end{rem}
\begin{center}{\color{red}\fbox{\color{black}\parbox{16cm}{
\begin{defn}
\label{def:class_symbols}For $0\leq\delta<1/2$ and for some family
of constant $C_{\alpha}>0,\alpha\in\mathbb{N}^{2d}$, we define the
\textbf{class of symbols}
\[
S_{\delta}:=\left\{ \psi\in C^{\infty}\left(M\right)\mbox{ s.t. }\forall\alpha\in\mathbb{N}^{2d},\left|\partial_{x}^{\alpha}\psi\right|<C_{\alpha}\hbar^{-\delta\left|\alpha\right|}\right\} 
\]
A symbol $\psi\in S_{\delta}$ is therefore a family of smooth functions
$\left(\psi_{\hbar}\right)_{\hbar}$.
\end{defn}
}}}\end{center}

\begin{center}{\color{red}\fbox{\color{black}\parbox{16cm}{
\begin{defn}
\label{def:quantization}For any symbol $\psi\in S_{\delta}$ we define
its quantization as the operator
\begin{equation}
\mathrm{Op}_{\hbar}\left(\psi\right):=\Pi_{\hbar}\circ\mathcal{M}(\psi)\circ\Pi_{\hbar}\quad:\mathcal{H}_{\hbar}\rightarrow\mathcal{H}_{\hbar}\label{eq:def_Op_h_Psi}
\end{equation}
where $\mathcal{M}(\psi)$ is the multiplication operator by the function
$\psi$ in $\mathcal{H}_{N}^{r}\left(P\right)$.
\end{defn}
}}}\end{center}

As it is usual in quantum mechanics, we call the operator $\mathrm{Op}_{\hbar}\left(\psi\right)$
a ``\textbf{quantum observable''}.
\begin{rem}
Definition \ref{def:quantization} is very similar to the definition
of Toeplitz (or anti-Wick) quantization of a symbol. The difference
is that the quantum space $\mathcal{H}_{\hbar}$ considered here is
attached to a given Anosov diffeomorphism $f:M\rightarrow M$. 
\end{rem}
In this Section beware that the operator norms $\left\Vert .\right\Vert $
and trace norms $\left\Vert .\right\Vert _{\mathrm{Tr}}$ are define
with respect to the norm on the Hilbert space $\mathcal{H}_{N}^{r}\left(P\right)$.

\begin{center}{\color{blue}\fbox{\color{black}\parbox{16cm}{
\begin{thm}
\label{prop:express_Op_Psi}For any class of symbols $S_{\delta}$,
there exist constants $C>0$ and $\varepsilon>0$ such that the following
holds: For every $\hbar$, there exists a smooth family of rank one
projectors $\pi_{x}:\mathcal{H}_{N}^{r}(P)\rightarrow\mathcal{H}_{\hbar}\subset\mathcal{H}_{N}^{r}(P)$
with $\left\Vert \pi_{x}\right\Vert \leq C$, parametrized by $x\in M$,
such that for any symbol $\psi\in S_{\delta}$, we have
\begin{equation}
\left\Vert \mathrm{Op}_{\hbar}\left(\psi\right)-\frac{1}{\left(2\pi\hbar\right)^{d}}\int_{M}\psi\left(x\right)\pi_{x}dx\right\Vert \leq C\hbar^{\varepsilon}\label{eq:Op_and_intgral_wavepckets}
\end{equation}
and
\begin{equation}
\left\Vert \left[\Pi_{\hbar},\mathcal{M}(\psi)\right]\right\Vert \leq C\hbar^{\varepsilon}.\label{eq:commut}
\end{equation}

\end{thm}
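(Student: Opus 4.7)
My plan is to build the rank one projectors $\pi_x$ from a family of localized semiclassical wave packets $\phi_x \in \mathcal{H}_N^r(P)$, $x \in M$, and reduce both claims to (i) a wave packet resolution of identity on $\mathcal{H}_\hbar$ and (ii) the concentration of the Schwartz kernel of $\Pi_\hbar$ near the diagonal on scale $\sqrt{\hbar}$.

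First, I would construct Gaussian-type wave packets $\phi_x$, equivariant at frequency $N$ under the $\mathbf{U}(1)$-action, of width $\sqrt{\hbar}$ transversely to the fibers and depending smoothly on $x \in M$. Using the microlocal tools developed in Section 2 of the paper (the FIO calculus for $\hat{F}_N$ and a wave packet transform adapted to the anisotropic Sobolev space), I would establish a resolution of identity of the form
\[
\Pi_\hbar \;=\; \frac{1}{(2\pi\hbar)^d} \int_M \Pi_\hbar\,|\phi_x\rangle\langle\phi_x|\,\Pi_\hbar\, dx \;+\; O(\hbar^{\varepsilon}),
\]
viewed as operators on $\mathcal{H}_N^r(P)$. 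The coefficient $(2\pi\hbar)^{-d}$ is fixed by matching traces with the Weyl law of Theorem \ref{thm:Weyl-formula}. Along the way, I would show that the diagonal quantity $c_x := \langle \phi_x, \Pi_\hbar \phi_x\rangle$ is uniformly bounded above and bounded away from zero as $\hbar \to 0$, and in fact $c_x = 1 + O(\hbar^{\varepsilon})$ after suitable normalization of $\phi_x$.

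Second, I would define
\[
\pi_x \;:=\; c_x^{-1}\,\Pi_\hbar\,|\phi_x\rangle\langle\phi_x|\,\Pi_\hbar.
\]
A direct algebraic computation using $\Pi_\hbar^2 = \Pi_\hbar$ shows $\pi_x^2 = \pi_x$, so $\pi_x$ is a rank one projector with range in $\mathcal{H}_\hbar$ and uniformly bounded operator norm. For (\ref{eq:Op_and_intgral_wavepckets}), I would use the localization property $\mathcal{M}(\psi) \phi_x = \psi(x)\phi_x + O(\hbar^{1/2-\delta})$, which follows from a Taylor expansion at $x$ because $\phi_x$ is concentrated on scale $\sqrt{\hbar}$ and $|\partial^\alpha \psi| \leq C_\alpha \hbar^{-\delta|\alpha|}$. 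Inserting this into the resolution of identity yields
\[
\Pi_\hbar \mathcal{M}(\psi) \Pi_\hbar \;=\; \frac{1}{(2\pi\hbar)^d} \int_M \psi(x)\, c_x\,\pi_x\, dx \;+\; O(\hbar^{\varepsilon}),
\]
and absorbing $c_x = 1 + O(\hbar^\varepsilon)$ gives the claimed bound. For the commutator estimate (\ref{eq:commut}), I would write
\[
[\Pi_\hbar, \mathcal{M}(\psi)] u (p) \;=\; \int K_{\Pi_\hbar}(p, p')\,(\psi(p') - \psi(p))\, u(p')\, dp',
\]
and use that $K_{\Pi_\hbar}$ is concentrated at distance $\sqrt{\hbar}$ from the diagonal (a direct consequence of the wave packet decomposition above): on the essential support, $|\psi(p') - \psi(p)| \lesssim \sqrt{\hbar}\cdot\hbar^{-\delta} = \hbar^{1/2-\delta}$, giving the stated estimate with $\varepsilon = 1/2 - \delta > 0$.

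The main obstacle is the resolution of identity. For flat phase spaces this is classical via Bargmann/Toeplitz coherent states, but here $\Pi_\hbar$ acts on the anisotropic Sobolev space $\mathcal{H}_N^r(P)$ and is defined dynamically from $\hat{F}_N$ through a contour integral of its resolvent. Proving that $\Pi_\hbar$ admits an accurate wave packet representation with the correct semiclassical normalization, and controlling $c_x$ uniformly, requires the full microlocal analysis of $\Pi_\hbar$ as a FIO microsupported on the trapped set $K \subset T^*M$, together with careful handling of the anisotropic norms transverse to $K$ where the symbol of the escape function blows up.
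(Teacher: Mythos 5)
Your proposal hinges entirely on a step you correctly identify as the obstacle and then do not prove: the wave-packet resolution of identity
\[
\Pi_\hbar \;=\; \frac{1}{(2\pi\hbar)^d}\int_M \Pi_\hbar\,|\phi_x\rangle\langle\phi_x|\,\Pi_\hbar\,dx + O(\hbar^\varepsilon).
\]
This is equivalent to saying that $\Pi_\hbar$ is, up to $O(\hbar^\varepsilon)$, microsupported on the Toeplitz (lowest Landau band) subspace, i.e.\ $\Pi_\hbar(\mathrm{Id}-\mathfrak{P}_0)\Pi_\hbar = O(\hbar^\varepsilon)$. But $\Pi_\hbar$ is defined purely spectrally, as a contour integral of the resolvent of $\hat{F}_N$; extracting its microlocal structure is precisely the hard content behind Theorem \ref{thm:The-quantum-operator_closed_to_Toeplitz} and is not available as a black-box input from Section 2. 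The paper never proves a resolution of identity for $\Pi_\hbar$ directly. Instead, it constructs explicit rank-one projectors $\hat{\pi}_\alpha(\nu)$ on local charts (built from the Bargmann transform and the finite-rank Taylor projector $T^{(0)}$, Lemma \ref{lm:trace_basic}), glues these across charts into a globally defined family $\hat{\pi}(x)$ via Cotlar-Stein and a Dunford integral, and \emph{only then} passes to the dynamically defined $\Pi_\hbar$ through the key a priori estimate $\|\tau_\hbar^{(0)}-\Pi_\hbar\|\le C\hbar^\epsilon$ (equation (\ref{eq:diff_projectors})), itself a corollary of the full band-structure machinery. That comparison with the explicit model is what makes the argument close; without it your resolution of identity has no foothold and the argument is circular, since the microlocal description of $\Pi_\hbar$ you want to invoke is essentially what you are trying to prove.

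The remaining ingredients of your plan (the normalization $c_x^{-1}\Pi_\hbar|\phi_x\rangle\langle\phi_x|\Pi_\hbar$ being idempotent, the Taylor expansion $\mathcal{M}(\psi)\phi_x=\psi(x)\phi_x+O(\hbar^{1/2-\delta})$, and the near-diagonal kernel estimate giving (\ref{eq:commut}) with $\varepsilon=1/2-\delta$) are sound in spirit and closely parallel the computations the paper does locally. But in the anisotropic space the kernel of the local rank-one projector decays only polynomially in the transverse $\zeta$-variables (Lemma \ref{lem:kernel_of_pi_nu}), not as a Gaussian, so the crude ``concentration at scale $\sqrt{\hbar}$'' bound needs care: you would have to check that the polynomial transverse decay (with exponent $r-k$) is strong enough, which is why the paper imposes conditions like $r>n+2+4d$ and must route all these kernel bounds through the explicit $\mathcal{W}_\hbar^r$-weighted Schur estimates rather than through a generic Gaussian near-diagonal argument.
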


}}}\end{center}

The proof of Theorem \ref{prop:express_Op_Psi} is given in Section
\ref{sub:Proof-of-Theorem1.40}.
\begin{rem}
In particular for the choice $\psi=1$ we get an approximate expression
of $\Pi_{\hbar}\equiv\mathrm{Id}\mid_{\mathcal{H}_{\hbar}}$ as $\left\Vert \Pi_{\hbar}-\frac{1}{\left(2\pi\hbar\right)^{d}}\int_{M}\pi_{x}dx\right\Vert \leq C\hbar^{\varepsilon}$
sometimes called ``resolution of identity''.
\end{rem}
~
\begin{rem}
Since $\mathrm{dim}\mathcal{H}_{\hbar}\leq C\cdot\hbar$ we have obviously
that an estimate in norm operator implies an estimate in trace class
norm by $\left\Vert \cdot\right\Vert _{\mathrm{Tr}}\leq C.\hbar^{-d}\left\Vert \cdot\right\Vert .$
\end{rem}
\begin{center}{\color{blue}\fbox{\color{black}\parbox{16cm}{
\begin{cor}
\textbf{\label{cor:Composition-formula}``Composition formula''}.
There exist $C>0$ and $\varepsilon>0$ such that for any $\hbar$
and any $\psi_{1},\psi_{2}\in S_{\delta}$
\begin{equation}
\left\Vert \mathrm{Op}_{\hbar}\left(\psi_{1}\right)\circ\mathrm{Op}_{\hbar}\left(\psi_{2}\right)-\mathrm{Op}_{\hbar}\left(\psi_{1}\psi_{2}\right)\right\Vert \leq C\hbar^{\varepsilon}\label{eq:composition_formula}
\end{equation}

\end{cor}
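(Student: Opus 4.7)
The plan is to reduce the composition formula to the commutator estimate (\ref{eq:commut}) of Theorem \ref{prop:express_Op_Psi} by a short algebraic manipulation. Unfolding the definition $\mathrm{Op}_{\hbar}(\psi) = \Pi_{\hbar}\circ\mathcal{M}(\psi)\circ\Pi_{\hbar}$ and using $\Pi_{\hbar}^{2} = \Pi_{\hbar}$ together with $\mathcal{M}(\psi_{1})\mathcal{M}(\psi_{2}) = \mathcal{M}(\psi_{1}\psi_{2})$, one computes
\begin{align*}
\mathrm{Op}_{\hbar}(\psi_{1})\circ\mathrm{Op}_{\hbar}(\psi_{2})
&= \Pi_{\hbar}\mathcal{M}(\psi_{1})\Pi_{\hbar}\mathcal{M}(\psi_{2})\Pi_{\hbar}\\
&= \Pi_{\hbar}\mathcal{M}(\psi_{1})\mathcal{M}(\psi_{2})\Pi_{\hbar} + \Pi_{\hbar}\mathcal{M}(\psi_{1})[\Pi_{\hbar},\mathcal{M}(\psi_{2})]\Pi_{\hbar}\\
&= \mathrm{Op}_{\hbar}(\psi_{1}\psi_{2}) + E,
\end{align*}
where $E := \Pi_{\hbar}\mathcal{M}(\psi_{1})[\Pi_{\hbar},\mathcal{M}(\psi_{2})]\Pi_{\hbar}$. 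The whole statement thus reduces to showing that $\|E\|\leq C\hbar^{\varepsilon}$.

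To estimate $E$ without having to control the operator norm of the bare multiplication operator $\mathcal{M}(\psi_{1})$ (which on the anisotropic space $\mathcal{H}_{N}^{r}(P)$ need not be uniformly bounded in $\hbar$ for a general symbol $\psi_{1}\in S_{\delta}$), I would insert $I = \Pi_{\hbar} + (I-\Pi_{\hbar})$ immediately after $\mathcal{M}(\psi_{1})$, which splits $E = E_{1} + E_{2}$ with
\[
E_{1} := \mathrm{Op}_{\hbar}(\psi_{1})\cdot[\Pi_{\hbar},\mathcal{M}(\psi_{2})]\Pi_{\hbar},\qquad E_{2} := \Pi_{\hbar}\mathcal{M}(\psi_{1})(I-\Pi_{\hbar})\cdot[\Pi_{\hbar},\mathcal{M}(\psi_{2})]\Pi_{\hbar}.
\]
The key elementary identity, obtained directly from $\Pi_{\hbar}^{2}=\Pi_{\hbar}$, is
\[
\Pi_{\hbar}\mathcal{M}(\psi_{1})(I-\Pi_{\hbar}) = \Pi_{\hbar}[\Pi_{\hbar},\mathcal{M}(\psi_{1})],
\]
which together with (\ref{eq:commut}) gives $\|\Pi_{\hbar}\mathcal{M}(\psi_{1})(I-\Pi_{\hbar})\|\leq C\hbar^{\varepsilon}$, and hence $\|E_{2}\|\leq C\hbar^{2\varepsilon}$. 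Applying (\ref{eq:commut}) directly to $E_{1}$ yields $\|E_{1}\|\leq \|\mathrm{Op}_{\hbar}(\psi_{1})\|\cdot C\hbar^{\varepsilon}\cdot\|\Pi_{\hbar}\|$.

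The only step requiring real care, and the one I expect to be the main obstacle, is the verification of the two uniform operator-norm bounds $\|\Pi_{\hbar}\|\leq C$ and $\|\mathrm{Op}_{\hbar}(\psi_{1})\|\leq C$ on $\mathcal{H}_{N}^{r}(P)$. The bound on the spectral projector follows from the Cauchy representation $\Pi_{\hbar} = (2\pi i)^{-1}\oint_{\gamma}(z-\hat{F}_{N})^{-1}\,dz$ along a fixed contour $\gamma$ encircling the external band $\mathcal{A}_{0}$, combined with the uniform resolvent estimate (\ref{eq:bound_resolvent}) supplied by Theorem \ref{thm:band_structure}. For the quantization, the integral formula (\ref{eq:Op_and_intgral_wavepckets}) identifies $\mathrm{Op}_{\hbar}(\psi_{1})$ with $(2\pi\hbar)^{-d}\int_{M}\psi_{1}(x)\pi_{x}\,dx$ modulo $O(\hbar^{\varepsilon})$; specialization to $\psi_{1}\equiv 1$ provides a frame-type resolution of $\Pi_{\hbar}$, and a Schur-test argument of the sort used to prove $\|T_{\psi}\|\leq\|\psi\|_{\infty}$ for Toeplitz operators then bounds the operator norm of $\mathrm{Op}_{\hbar}(\psi_{1})$ by $C\|\psi_{1}\|_{\infty}$. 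Putting these ingredients together yields $\|E\|\leq C\hbar^{\varepsilon}$ (possibly with a slightly smaller $\varepsilon>0$ than that in Theorem \ref{prop:express_Op_Psi}), which is the claim.
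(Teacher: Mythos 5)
Your proof follows the same route as the paper's: both hinge entirely on the commutator estimate (\ref{eq:commut}) applied inside the expression $\Pi_{\hbar}\mathcal{M}(\psi_{1})\Pi_{\hbar}\mathcal{M}(\psi_{2})\Pi_{\hbar}$. In fact the paper's proof is a two-line application of (\ref{eq:commut}) with the error silently absorbed into $O(\hbar^{\varepsilon})$, while you correctly flag that the error term $E=\Pi_{\hbar}\mathcal{M}(\psi_{1})[\Pi_{\hbar},\mathcal{M}(\psi_{2})]\Pi_{\hbar}$ contains the unbounded factor $\mathcal{M}(\psi_{1})$ and therefore needs one more observation; spelling this out is a genuine improvement over the paper's terseness.

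However, your splitting $E=E_{1}+E_{2}$ contains a redundancy: the term $E_{1}=\mathrm{Op}_{\hbar}(\psi_{1})\circ[\Pi_{\hbar},\mathcal{M}(\psi_{2})]\circ\Pi_{\hbar}$ vanishes \emph{identically}, because
\[
\Pi_{\hbar}\,[\Pi_{\hbar},\mathcal{M}(\psi_{2})]\,\Pi_{\hbar}
=\Pi_{\hbar}\mathcal{M}(\psi_{2})\Pi_{\hbar}-\Pi_{\hbar}\mathcal{M}(\psi_{2})\Pi_{\hbar}=0,
\]
so that $E=E_{2}=\Pi_{\hbar}[\Pi_{\hbar},\mathcal{M}(\psi_{1})]\,[\Pi_{\hbar},\mathcal{M}(\psi_{2})]\,\Pi_{\hbar}$. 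Consequently the entire second half of your argument — establishing $\|\mathrm{Op}_{\hbar}(\psi_{1})\|\leq C$ via the integral representation (\ref{eq:Op_and_intgral_wavepckets}) and a Schur-type argument — is unnecessary for this corollary; the only uniform bound you actually need is $\|\Pi_{\hbar}\|\leq C$, which you do correctly extract from the resolvent estimate (\ref{eq:bound_resolvent}) (alternatively from (\ref{eq:diff_projectors}) together with the uniform bound on $\tau^{(0)}$). With that one observation the proof closes in a single line: two applications of (\ref{eq:commut}) and two factors of $\|\Pi_{\hbar}\|$ give $\|E\|\leq C\hbar^{2\varepsilon}$. So the argument is correct, but you have done more work than the statement requires.
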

}}}\end{center}
\begin{proof}
Below the notation $O\left(\hbar^{\varepsilon}\right)$ means that
this is a term with norm less than $C\hbar^{\varepsilon}$. We have
\begin{align*}
\mathrm{Op}_{\hbar}\left(\psi_{1}\right)\circ\mathrm{Op}_{\hbar}\left(\psi_{1}\right)= & \Pi_{\hbar}\mathcal{M}(\psi_{1})\Pi_{\hbar}\mathcal{M}(\psi_{2})\Pi_{\hbar}\\
\underset{(\ref{eq:commut})}{=} & \Pi_{\hbar}\mathcal{M}(\psi_{1})\mathcal{M}(\psi_{2})\Pi_{\hbar}+O\left(\hbar^{\varepsilon}\right)\\
= & \Pi_{\hbar}\mathcal{M}(\psi_{1}\psi_{2})\Pi_{\hbar}+O\left(\hbar^{\varepsilon}\right)=\mathrm{Op}_{\hbar}\left(\psi_{1}\psi_{2}\right)+O\left(\hbar^{\varepsilon}\right)
\end{align*}
\end{proof}
\begin{rem}
Composition formula (\ref{eq:composition_formula}) expressed the
property that the operator $\mathrm{Op}_{\hbar}\left(\psi\right)$
has the so-called ``microlocal property''. This is seen by taking
$\psi_{1}$ and $\psi_{2}$ with disjoint supports giving that $\left\Vert \mathrm{Op}_{\hbar}\left(\psi_{1}\right)\circ\mathrm{Op}_{\hbar}\left(\psi_{2}\right)\right\Vert \leq C\hbar^{\varepsilon}$.

\end{rem}
~
\begin{rem}
In Proposition \ref{prop:express_Op_Psi} the operator $\mathrm{Op}_{\hbar}\left(\psi\right)$
is decomposed into rank one operators $\pi_{x}$. Each operator $\pi_{x}$
is a projector and can be written $\pi_{x}(\cdot)=\left(\varphi_{x},.\right)_{\mathcal{H}_{\hbar}^{r}}\cdot\psi_{x}$
with $\psi_{x},\varphi_{x}\in\mathcal{H}_{\hbar}^{r}$. From the ``microlocal
property'' given in the previous remark, we can think of $\psi_{x},\varphi_{x}$
as ``microlocal wave packets'' and $\pi_{x}$ as a projection over
these wave packets. In the proof of Proposition \ref{eq:Op_and_intgral_wavepckets},
(\ref{eq:def_projector_pi_alpha}), we can find a very explicit expression
of $\pi_{x}$ on local coordinate charts where $\psi_{x},\varphi_{x}$
are respectively along the unstable and stable directions in the orthogonal
space $\left(K\right)^{\perp}$ (in variables $\zeta$) and are localized
wave packets within the trapped set $K$ (in the variables $\nu$). 
\end{rem}
\begin{center}{\color{blue}\fbox{\color{black}\parbox{16cm}{
\begin{cor}
\textbf{\label{cor:Adjoint_of_observable}``Adjoint of observables''}.
There exist $C>0$, $\varepsilon>0$, such that for any $\hbar$ and
any $\psi\in S_{\delta}$
\begin{equation}
\left\Vert \left(\mathrm{Op}_{\hbar}\left(\psi\right)\right)_{\mathcal{H}_{\hbar}}^{\dagger}-\mathrm{Op}_{\hbar}\left(\overline{\psi}\right)\right\Vert \leq C\hbar^{\varepsilon}\label{eq:adjoint_observable}
\end{equation}
The adjoint operator is defined here in the space $\mathcal{H}_{\hbar}$
by the relation $\left(u,\left(\mathrm{Op}_{\hbar}\left(\psi\right)\right)^{\dagger}v\right)_{\mathcal{H}_{\hbar}}=\left(\mathrm{Op}_{\hbar}\left(\psi\right)u,v\right)_{\mathcal{H}_{\hbar}}$for
any $u,v\in\mathcal{H}_{\hbar}$.
\end{cor}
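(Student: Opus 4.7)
The plan is to reduce Corollary~\ref{cor:Adjoint_of_observable} to the integral representation of Theorem~\ref{prop:express_Op_Psi}. Writing
\[
\mathrm{Op}_{\hbar}(\psi)=\tfrac{1}{(2\pi\hbar)^{d}}\int_{M}\psi(x)\,\pi_{x}\,dx+R_{\psi},\qquad\|R_{\psi}\|\le C\hbar^{\varepsilon},
\]
I would apply the same formula to $\bar\psi$ and take the $\mathcal{H}_{\hbar}$-adjoint of the first identity. Since $A\mapsto A^{\dagger}$ on $\mathcal{H}_{\hbar}$ is an antilinear isometry that commutes with the Bochner integral (turning the scalar $\psi(x)$ into $\overline{\psi(x)}$ and the rank-one operator $\pi_{x}$ into $\pi_{x}^{\dagger}$), this gives
\[
\mathrm{Op}_{\hbar}(\psi)^{\dagger}=\tfrac{1}{(2\pi\hbar)^{d}}\int_{M}\overline{\psi(x)}\,\pi_{x}^{\dagger}\,dx+R_{\psi}^{\dagger},
\]
with $\|R_{\psi}^{\dagger}\|\le C\hbar^{\varepsilon}$. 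Subtracting the analogous representation for $\mathrm{Op}_{\hbar}(\bar\psi)$ (obtained by directly substituting $\bar\psi$ into Theorem~\ref{prop:express_Op_Psi}) reduces the problem to the key estimate
\[
\Bigl\|\tfrac{1}{(2\pi\hbar)^{d}}\int_{M}\overline{\psi(x)}\,(\pi_{x}^{\dagger}-\pi_{x})\,dx\Bigr\|\le C\hbar^{\varepsilon}. \quad (\star)
\]

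The main step is $(\star)$. For this I would return to the explicit construction of $\pi_{x}$ supplied by the proof of Theorem~\ref{prop:express_Op_Psi}: in the local Bargmann-type model on the normal symplectic fibres of the trapped set $K$, each $\pi_{x}$ has the form $|\psi_{x}\rangle\langle\varphi_{x}|$, where $\psi_{x}$ is a Gaussian wave packet along the unstable direction and $\varphi_{x}$ is its dual along the stable direction. The $\mathcal{H}_{\hbar}$-adjoint $\pi_{x}^{\dagger}=|\varphi_{x}\rangle\langle\psi_{x}|$ simply interchanges the two. Because the Bargmann local model is symmetric under this unstable/stable swap (both vectors being coherent states with the same Gaussian profile, only the choice of isotropic line for phase localization changes), the family $\{\pi_{x}^{\dagger}\}_{x\in M}$ is another family of rank-one projectors satisfying the hypotheses of Theorem~\ref{prop:express_Op_Psi}. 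In particular, integrating it against any smooth symbol produces the same semiclassical quantization as $\{\pi_{x}\}$ up to an error of order $\hbar^{\varepsilon}$, which is exactly $(\star)$.

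Concretely, $(\star)$ can be verified by computing the Schwartz kernel of $(2\pi\hbar)^{-d}\int_{M}\overline{\psi(x)}\,\pi_{x}^{\dagger}\,dx$ by Laplace/stationary-phase asymptotics in the Bargmann variables, and comparing it termwise to that of $\mathrm{Op}_{\hbar}(\bar\psi)$; the commutator bound (\ref{eq:commut}) together with the derivative losses $\hbar^{-\delta}$ allowed by Definition~\ref{def:class_symbols} (with $\delta<1/2$, so each derivative still costs a net factor $\hbar^{1-2\delta}>0$) controls the subleading remainders. The principal obstacle is this bookkeeping inside the local model—in particular, verifying that the non-orthogonal projector $\Pi_{\hbar}$ respects the stable/unstable interchange up to $O(\hbar^{\varepsilon})$; once the swap-invariance of the coherent-state family is identified, the corollary follows immediately from the two integral representations displayed above.
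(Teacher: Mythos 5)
Your reduction to the estimate $(\star)$ is where the proof breaks down. You observe, correctly, that the paper's Theorem~\ref{prop:express_Op_Psi} gives
\[
\mathrm{Op}_{\hbar}(\psi)=\frac{1}{(2\pi\hbar)^{d}}\int_{M}\psi(x)\,\pi_{x}\,dx+O(\hbar^{\varepsilon}),
\]
and that after taking adjoints and subtracting the analogous formula for $\bar\psi$, the corollary is equivalent to
\[
\Bigl\|\frac{1}{(2\pi\hbar)^{d}}\int_{M}\overline{\psi(x)}\,\bigl(\pi_{x}^{\dagger}-\pi_{x}\bigr)\,dx\Bigr\|\leq C\hbar^{\varepsilon}.
\]
But you do not prove this. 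The ``swap-symmetry'' heuristic is not sufficient, for the following reason: the adjoint $\pi_{x}^{\dagger}$ is taken with respect to the $\mathcal{H}_{N}^{r}(P)$ inner product, and that space is a \emph{weighted anisotropic} Sobolev space whose escape function $\mathcal{W}_{\hbar}^{r}$ is precisely \emph{not} symmetric under interchange of the stable and unstable directions (indeed the asymmetry between $E_{u}$ and $E_{s}$ is the whole point of the construction). So ``the Bargmann local model is symmetric under this unstable/stable swap'' is false at the level of the inner product, and the claim that $\{\pi_{x}^{\dagger}\}$ ``produces the same semiclassical quantization as $\{\pi_{x}\}$'' does not follow from the existence statement in Theorem~\ref{prop:express_Op_Psi} (which asserts the existence of one family, not the equivalence of all such families). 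You flag the obstacle yourself in your last sentence but leave it unresolved; the ``Laplace/stationary-phase'' step that would close it is exactly the content you would need to supply.

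What makes the gap significant rather than cosmetic is that the corollary is genuinely non-trivial precisely because multiplication by $\psi$ is \emph{not} adjoint to multiplication by $\bar\psi$ in $\mathcal{H}_{N}^{r}(P)$. The paper isolates this point into Corollary~\ref{cor:transpose}, which says that $(u,\psi v)_{\mathcal{H}_{\hbar}^{r}}=(\bar\psi u,v)_{\mathcal{H}_{\hbar}^{r}}+O(\hbar^{\theta})$ for $\psi$ in the admissible symbol class. With that in hand the paper's proof is a four-line computation at the level of inner products: for $u,v\in\mathcal{H}_{\hbar}$ one writes $(\mathrm{Op}_{\hbar}(\psi)u,v)=(\Pi_{\hbar}\mathcal{M}(\psi)u,v)$, commutes $\Pi_{\hbar}$ past $\mathcal{M}(\psi)$ using the bound~(\ref{eq:commut}), applies Corollary~\ref{cor:transpose} to move $\psi$ across the inner product as $\bar\psi$, and commutes back. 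Your route avoids Corollary~\ref{cor:transpose} by going through the integral representation, but then $(\star)$ ends up encoding the same analytical content (the almost-self-adjointness of multiplication in the anisotropic space, now at the level of the wave-packet family) in a more opaque form, and you have not carried out the kernel estimate needed to establish it. To complete your argument you would in effect have to reprove Corollary~\ref{cor:transpose}, at which point the inner-product route is shorter.
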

}}}\end{center}
\begin{proof}
For any $u,v\in\mathcal{H}_{\hbar}$ we have
\begin{align*}
\left(u,\left(\mathrm{Op}_{\hbar}\left(\psi\right)\right)^{\dagger}v\right)_{\mathcal{H}_{\hbar}} & =\left(\mathrm{Op}_{\hbar}\left(\psi\right)u,v\right)_{\mathcal{H}_{\hbar}}=\left(\Pi_{\hbar}\mathcal{M}(\psi)u,v\right)_{\mathcal{H}_{N}^{r}\left(P\right)}\\
 & \underset{(\ref{eq:commut})}{=}\left(\mathcal{M}(\psi)\Pi_{\hbar}u,v\right)_{\mathcal{H}_{N}^{r}\left(P\right)}+O\left(\hbar^{\varepsilon}\|u\|\cdot\|v\|\right)\\
 & \underset{}{=}\left(u,\mathcal{M}(\overline{\psi})v\right)_{\mathcal{H}_{N}^{r}\left(P\right)}+O\left(\hbar^{\varepsilon}\|u\|\cdot\|v\|\right)\\
 & \underset{(\ref{eq:commut})}{=}\left(u,\Pi_{\hbar}\mathcal{M}(\overline{\psi})v\right)_{\mathcal{H}_{N}^{r}\left(P\right)}+O\left(\hbar^{\varepsilon}\|u\|\cdot\|v\|\right)\\
 & =\left(u,\mathrm{Op}_{\hbar}\left(\overline{\psi}\right)v\right)_{\mathcal{H}_{\hbar}}+O\left(\hbar^{\varepsilon}\|u\|\cdot\|v\|\right)
\end{align*}
(The equality in the middle will be given in Corollary \ref{cor:transpose}.
)
\end{proof}
\begin{center}{\color{blue}\fbox{\color{black}\parbox{16cm}{
\begin{prop}
\textbf{\label{prop:Exact-Egorov-formula.}``Exact Egorov formula''.}
For any $\hbar$ and any $\psi\in S_{\delta}$, we have
\begin{equation}
\hat{\mathcal{F}}_{\hbar}\circ\mathrm{Op}_{\hbar}\left(\psi\right)=\mathrm{Op}_{\hbar}\left(\psi\circ f^{-1}\right)\circ\hat{\mathcal{F}}_{\hbar}\label{eq:exact_Egorov_formula}
\end{equation}

\end{prop}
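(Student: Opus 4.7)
The exact Egorov formula is purely algebraic once we identify two key ingredients: the fact that the spectral projector $\Pi_{\hbar}$ commutes with $\hat{F}_{N}$, and the exact intertwining relation between $\hat{F}_{N}$ and multiplication operators.

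First I would record the intertwining relation
\[
\hat{F}_{N}\circ\mathcal{M}(\psi)=\mathcal{M}(\psi\circ f^{-1})\circ\hat{F}_{N},
\]
valid for every $\psi\in C^{\infty}(M)$. This is an exact identity (no error term): evaluating both sides on $u\in C_{N}^{\infty}(P)$ at a point $p\in P$ and using the defining property $\pi\circ\tilde{f}^{-1}=f^{-1}\circ\pi$ of the prequantum lift from Theorem \ref{thm:Bundle-P_map_f_tilde}, both sides equal $e^{V(\pi(p))}\,\psi(f^{-1}(\pi(p)))\,u(\tilde{f}^{-1}(p))$. Note that the prefactor $e^{V\circ\pi}$ poses no obstruction because $\mathcal{M}(\psi)$ commutes with multiplication by $e^{V\circ\pi}$.

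Next I would use that $\Pi_{\hbar}$ is by construction the spectral projector of $\hat{F}_{N}$ on the external annulus $\mathcal{A}_{0}$ (Definition \ref{def:quantum_operator}); as such, it commutes with $\hat{F}_{N}$, and together with $\Pi_{\hbar}^{2}=\Pi_{\hbar}$ this yields the identity $\hat{\mathcal{F}}_{\hbar}=\Pi_{\hbar}\hat{F}_{N}\Pi_{\hbar}=\Pi_{\hbar}\hat{F}_{N}=\hat{F}_{N}\Pi_{\hbar}$ on $\mathcal{H}_{N}^{r}(P)$. The computation then telescopes: using $\mathrm{Op}_{\hbar}(\psi)=\Pi_{\hbar}\mathcal{M}(\psi)\Pi_{\hbar}$,
\[
\hat{\mathcal{F}}_{\hbar}\circ\mathrm{Op}_{\hbar}(\psi)
=\Pi_{\hbar}\hat{F}_{N}\Pi_{\hbar}\mathcal{M}(\psi)\Pi_{\hbar}
=\Pi_{\hbar}\hat{F}_{N}\mathcal{M}(\psi)\Pi_{\hbar}
=\Pi_{\hbar}\mathcal{M}(\psi\circ f^{-1})\hat{F}_{N}\Pi_{\hbar},
\]
where the second equality absorbs one of the $\Pi_{\hbar}$ using $\Pi_{\hbar}\hat{F}_{N}\Pi_{\hbar}=\Pi_{\hbar}\hat{F}_{N}$ (commutation plus idempotency), and the third uses the intertwining relation above. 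Similarly on the right-hand side,
\[
\mathrm{Op}_{\hbar}(\psi\circ f^{-1})\circ\hat{\mathcal{F}}_{\hbar}
=\Pi_{\hbar}\mathcal{M}(\psi\circ f^{-1})\Pi_{\hbar}\hat{F}_{N}\Pi_{\hbar}
=\Pi_{\hbar}\mathcal{M}(\psi\circ f^{-1})\hat{F}_{N}\Pi_{\hbar},
\]
again by commuting $\Pi_{\hbar}$ through $\hat{F}_{N}$ and absorbing it.

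Comparing the two expressions yields the identity after restriction to $\mathcal{H}_{\hbar}=\mathrm{Im}(\Pi_{\hbar})$. There is no genuine obstacle: the only place where approximation appeared in the preceding results (the $O(\hbar^{\varepsilon})$ commutator estimate \eqref{eq:commut}) is not needed here, precisely because the multiplication operator is conjugated by $\Pi_{\hbar}$ on both sides in the definition of $\mathrm{Op}_{\hbar}$, so the projector can be moved past $\hat{F}_{N}$ exactly rather than past $\mathcal{M}(\psi)$. This exactness is the point of the statement and distinguishes the ``natural quantization'' from standard Toeplitz or Weyl quantization schemes where Egorov holds only up to $O(\hbar)$.
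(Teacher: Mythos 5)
Your proof is correct and follows essentially the same route as the paper: identify the exact intertwining relation $\hat{F}_{N}\circ\mathcal{M}(\psi)=\mathcal{M}(\psi\circ f^{-1})\circ\hat{F}_{N}$, then use that $\Pi_{\hbar}$ is a spectral projector commuting with $\hat{F}_{N}$ (so the middle $\Pi_{\hbar}$ in $\Pi_{\hbar}\hat{F}_{N}\Pi_{\hbar}\mathcal{M}(\psi)\Pi_{\hbar}$ can be absorbed) to telescope both sides to $\Pi_{\hbar}\mathcal{M}(\psi\circ f^{-1})\hat{F}_{N}\Pi_{\hbar}$. Your added remark that this is exact precisely because $\Pi_{\hbar}$ is moved past $\hat{F}_{N}$ rather than past $\mathcal{M}(\psi)$ is a correct reading of the mechanism and matches the paper's intent.
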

}}}\end{center}
\begin{proof}
We use that $\Pi_{\hbar}$ is a spectral projector of $\hat{\mathcal{F}}_{\hbar}=\Pi_{\hbar}\hat{F}_{N}\Pi_{\hbar}$
and that from its definition (\ref{eq:def_prequantum_operator_F}),
$\hat{F}_{N}$ is a transfer operator hence $\hat{F}_{N}\circ\mathcal{M}(\psi)=\mathcal{M}(\psi\circ f^{-1})\circ\hat{F}_{N}$
:
\begin{align*}
\hat{\mathcal{F}}_{\hbar}\circ\mathrm{Op}_{\hbar}\left(\psi\right) & =\Pi_{\hbar}\hat{F}_{N}\Pi_{\hbar}\mathcal{M}(\psi)\Pi_{\hbar}=\Pi_{\hbar}\hat{F}_{N}\mathcal{M}(\psi)\Pi_{\hbar}=\Pi_{\hbar}\mathcal{M}(\psi\circ f^{-1})\hat{F}_{N}\Pi_{\hbar}\\
 & =\Pi_{\hbar}\mathcal{M}(\psi\circ f^{-1})\Pi_{\hbar}\hat{F}_{N}\Pi_{\hbar}=\mathrm{Op}_{\hbar}\left(\psi\circ f^{-1}\right)\circ\hat{\mathcal{F}}_{\hbar}
\end{align*}

\end{proof}
\begin{center}{\color{blue}\fbox{\color{black}\parbox{16cm}{
\begin{prop}
\textbf{\label{prop:Trace-of-observables.}``Trace of observables''.}
There exists $C>0$, $\varepsilon>0$, such that for any $\hbar$
and any $\psi\in S_{\delta}$ we have
\begin{equation}
\left|\left(2\pi\hbar\right)^{d}\mathrm{Tr}\left(\mathrm{Op}_{\hbar}\left(\psi\right)\right)-\int_{M}\psi dx\right|\leq C\hbar^{\varepsilon}\label{eq:trace_of_observables}
\end{equation}

\end{prop}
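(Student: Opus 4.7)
The plan is to derive the trace formula as a direct consequence of Theorem \ref{prop:express_Op_Psi}, which decomposes $\mathrm{Op}_{\hbar}(\psi)$ into an integral of rank-one projectors. Writing
\[
\mathrm{Op}_{\hbar}(\psi) = \frac{1}{(2\pi\hbar)^{d}}\int_{M}\psi(x)\,\pi_{x}\,dx + R_{\hbar},
\qquad \|R_{\hbar}\|\le C\hbar^{\varepsilon},
\]
I take the trace on both sides. Since each $\pi_{x}$ is a rank-one projector (idempotent of rank one) acting on $\mathcal{H}_{N}^{r}(P)$, its trace is $\mathrm{Tr}(\pi_{x})=1$ independently of $x$. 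Hence the integral term contributes exactly $(2\pi\hbar)^{-d}\int_{M}\psi(x)\,dx$, which after multiplication by $(2\pi\hbar)^{d}$ yields the leading term $\int_{M}\psi\,dx$ in the statement.

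It remains to control the error $\mathrm{Tr}(R_{\hbar})$. The operators involved factor through the finite-dimensional space $\mathcal{H}_{\hbar}=\mathrm{Im}(\Pi_{\hbar})$, because both $\mathrm{Op}_{\hbar}(\psi)$ and the integral $\int \psi(x)\pi_{x}\,dx$ have range in $\mathcal{H}_{\hbar}$ and vanish on its complement in the spectral decomposition (they are sandwiched by $\Pi_{\hbar}$). Therefore $R_{\hbar}$ is a finite-rank operator with rank at most $\dim\mathcal{H}_{\hbar}$, and the general bound $\|R_{\hbar}\|_{\mathrm{Tr}}\le \mathrm{rank}(R_{\hbar})\cdot\|R_{\hbar}\|$ combined with the Weyl estimate $\dim\mathcal{H}_{\hbar}=O(\hbar^{-d})$ from Theorem \ref{thm:Weyl-formula} gives $|\mathrm{Tr}(R_{\hbar})|\le C'\hbar^{\varepsilon-d}$. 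Multiplying by $(2\pi\hbar)^{d}$ absorbs the factor $\hbar^{-d}$ and leaves an error of order $\hbar^{\varepsilon}$, which is exactly the bound (\ref{eq:trace_of_observables}).

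The proof is essentially bookkeeping once Theorem \ref{prop:express_Op_Psi} is granted; there is no serious obstacle. The only delicate point worth checking is that the trace is well-defined and that the identity $\mathrm{Tr}(\pi_{x})=1$ is independent of which inner product on $\mathcal{H}_{N}^{r}(P)$ is used — this follows from idempotency and rank one, since the trace of a finite-rank operator is intrinsic. One must also verify that $R_{\hbar}$ is indeed finite-rank with rank $O(\hbar^{-d})$, which is immediate because it is the difference of two operators both of the form $\Pi_{\hbar}\,(\cdots)\,\Pi_{\hbar}$ acting on $\mathcal{H}_{\hbar}$.
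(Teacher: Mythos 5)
Your proof is correct and takes essentially the same route as the paper: compute the trace of the leading integral term using $\mathrm{Tr}(\pi_x)=1$, then bound the trace of the error term via the operator norm estimate of Theorem \ref{prop:express_Op_Psi} together with the Weyl law (which gives the rank bound $O(\hbar^{-d})$, equivalently $\|\cdot\|_{\mathrm{Tr}}\le C\hbar^{-d}\|\cdot\|$). The observations about the range of $R_\hbar$ lying in $\mathcal{H}_\hbar$ and the intrinsic nature of the trace of an idempotent are correct and simply make explicit what the paper leaves implicit.
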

}}}\end{center}
\begin{proof}
Since $\mathrm{Tr}\left(\pi_{x}\right)=1$, we have
\[
\mathrm{Tr}\left(\int_{M}\psi\left(x\right)\pi_{x}dx\right)=\int_{M}\psi dx.
\]
Then
\begin{align*}
\left|\left(2\pi\hbar\right)^{d}\mathrm{Tr}\left(\mathrm{Op}_{\hbar}\left(\psi\right)\right)-\int_{M}\psi dx\right| & \leq\left(2\pi\hbar\right)^{d}\left\Vert \mathrm{Op}_{\hbar}\left(\psi\right)-\frac{1}{\left(2\pi\hbar\right)^{d}}\int_{M}\psi\left(x\right)\pi_{x}dx\right\Vert _{\mathrm{Tr}}\\
 & \leq C\left\Vert \mathrm{Op}_{\hbar}\left(\psi\right)-\frac{1}{\left(2\pi\hbar\right)^{d}}\int_{M}\psi\left(x\right)\pi_{x}dx\right\Vert \\
 & \underset{(\ref{eq:Op_and_intgral_wavepckets})}{\leq}C'\hbar^{\varepsilon}
\end{align*}
In the second line we have used (\ref{eq:weyl_law}) to get that $\left\Vert \cdot\right\Vert _{\mathrm{Tr}}\leq C.\hbar^{-d}\left\Vert \cdot\right\Vert $.
\end{proof}
Taking $\psi=1$ in (\ref{eq:trace_of_observables}) and because $\mathrm{Op}_{\hbar}\left(1\right)=\Pi_{\hbar}$
hence $\mathrm{dim}\mathcal{H}_{\hbar}=\mathrm{Tr}\left(\Pi_{\hbar}\right)=\mathrm{Tr}\left(\mathrm{Op}_{\hbar}\left(1\right)\right)$
we obtain the Weyl law (\ref{eq:weyl_law}):

\begin{center}{\color{blue}\fbox{\color{black}\parbox{16cm}{
\begin{cor}
\label{cor:Weyl_law}We have
\[
\mathrm{dim}\mathcal{H}_{\hbar}=\frac{1}{\left(2\pi\hbar\right)^{d}}\mathrm{Vol}_{\omega}\left(M\right)\left(1+O\left(\hbar^{\varepsilon}\right)\right)
\]

\end{cor}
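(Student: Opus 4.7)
The plan is extremely short: this corollary is just the specialization of Proposition \ref{prop:Trace-of-observables.} to the constant symbol $\psi \equiv 1$, and indeed the paragraph immediately preceding the corollary already sketches exactly that. First I would observe that the constant function $1$ trivially belongs to the symbol class $S_{\delta}$ of Definition \ref{def:class_symbols}, since all its derivatives vanish and so any choice of constants $C_{\alpha}$ works. From Definition \ref{def:quantization}, since the multiplication operator $\mathcal{M}(1)$ is the identity and $\Pi_{\hbar}$ is a projector, we have
\[
\mathrm{Op}_{\hbar}(1) = \Pi_{\hbar} \circ \mathcal{M}(1) \circ \Pi_{\hbar} = \Pi_{\hbar}^{2} = \Pi_{\hbar},
\]
so that $\mathrm{Tr}\bigl(\mathrm{Op}_{\hbar}(1)\bigr) = \mathrm{Tr}(\Pi_{\hbar}) = \dim \mathrm{Im}(\Pi_{\hbar}) = \dim \mathcal{H}_{\hbar}$.

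Next I would feed $\psi = 1$ into the estimate \eqref{eq:trace_of_observables} of Proposition \ref{prop:Trace-of-observables.}, yielding
\[
\left| (2\pi\hbar)^{d}\, \dim \mathcal{H}_{\hbar} - \int_{M} dx \right| \leq C\hbar^{\varepsilon}.
\]
Here the measure $dx$ is the symplectic volume $\frac{1}{d!}\omega^{\wedge d}$, so by definition $\int_{M} dx = \mathrm{Vol}_{\omega}(M)$. Dividing by $(2\pi\hbar)^{d}$ and factoring out $\mathrm{Vol}_{\omega}(M)$ gives
\[
\dim \mathcal{H}_{\hbar} = \frac{1}{(2\pi\hbar)^{d}}\, \mathrm{Vol}_{\omega}(M)\bigl(1 + O(\hbar^{\varepsilon})\bigr),
\]
which is the claimed Weyl law.

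There is no real obstacle to overcome at this stage: all the analytic content — the semiclassical resolution of identity \eqref{eq:Op_and_intgral_wavepckets}, the fact that $\mathrm{Tr}(\pi_{x}) = 1$ for the rank-one projectors, and the conversion between operator norm and trace norm via the dimension bound — has already been invested in the proof of Proposition \ref{prop:Trace-of-observables.}. The corollary is simply the bookkeeping step that identifies $\dim \mathcal{H}_{\hbar}$ with $\mathrm{Tr}(\mathrm{Op}_{\hbar}(1))$ and reads off the leading asymptotic. The sharper index-theoretic refinement (with remainder $O(N^{d-1})$ rather than $O(\hbar^{\varepsilon})$) is the content of Theorem \ref{thm:Weyl-formula} and requires the separate rough-Laplacian argument; here we only aim for the leading-order Weyl count, which is immediate.
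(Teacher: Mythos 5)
Your proof is correct and is precisely the argument the paper gives in the sentence immediately preceding the corollary: take $\psi\equiv 1$ in Proposition \ref{prop:Trace-of-observables.}, use $\mathrm{Op}_{\hbar}(1)=\Pi_{\hbar}$ to identify $\mathrm{Tr}(\mathrm{Op}_{\hbar}(1))$ with $\dim\mathcal{H}_{\hbar}$, and read off the asymptotic. There is nothing to add.
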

}}}\end{center}

\begin{rem}
Eq.(\ref{eq:exact_Egorov_formula}) expresses transport properties
of the operator $\hat{\mathcal{F}}_{\hbar}$. For usual quantization
scheme of non-linear map $f:M\rightarrow M$ the Egorov formula has
some error $O\left(\hbar\right)$ in operator norm. It is therefore
remarkable that the formula is exact here. We can iterate it for any
time $n\geq1$ and obtain $\hat{\mathcal{F}}_{\hbar}^{n}\circ\mathrm{Op}_{\hbar}\left(\psi\right)=\mathrm{Op}_{\hbar}\left(\psi\circ f^{-n}\right)\circ\hat{\mathcal{F}}_{\hbar}^{n}$.

For Schrodinger equation, it is a difficult task to express long time
dynamics of initial states using the classical underlying dynamics
(relevant in the limit of small wavelength). This is due to interferences
effects, semiclassical corrections that grows rapidly and that are
difficult to control. In our situation, the prequantum operator $\tilde{F}$
is a transfer operator for which transport properties are exact. Since
the quantum operator $\mathcal{F}_{\hbar}$ is a spectral restriction
of $\tilde{F}$ we have immediately that for any initial state $u\in\mathcal{H}_{\hbar}$,
any time $n\geq0$,
\[
\mathcal{F}_{\hbar}^{n}u=\left(\tilde{F}^{n}u\right)_{\mid\mathcal{H}_{\hbar}}
\]
so that quantum evolution and classical transport coincide for any
time.
\end{rem}
~
\begin{rem}
In their paper \cite{marklof-05}, J. Marklof and S. O'Keefe propose
some axioms for quantum observables associated to quantum maps. Their
Axioms \cite[Axiom2.1,(a),(b),(c) page282]{marklof-05} correspond
respectively to Proposition \ref{cor:Adjoint_of_observable}, \ref{cor:Composition-formula}
and \ref{prop:Trace-of-observables.} above. Their Axiom \cite[Axiom2.2, page282]{marklof-05}
corresponds to Proposition \ref{prop:Exact-Egorov-formula.} (with
no remainder here).
\end{rem}

\subsection{\label{sub:1.8Geometric-quantization-of}Geometric quantization of
the symplectic map}

We have discussed above the ``natural quantization'' of the map
$f$ as the operator $\mathcal{F}_{\hbar}:\mathcal{H}_{\hbar}\rightarrow\mathcal{H}_{\hbar}$
and showed its nice properties. There is however a ``standard quantization''
of the map $f$ defined in the literature in the framework of geometric
quantization. In this subsection we recall first the definition of
``geometric quantization'' or ``Toeplitz quantization'' of the
symplectic map $f:M\rightarrow M$. Then we compare it with the ``natural
quantization'' $\mathcal{F}_{\hbar}:\mathcal{H}_{\hbar}\rightarrow\mathcal{H}_{\hbar}$
introduced above.

In Section \ref{sub:The-rough-Laplacian} we introduce the rough Laplacian
operator $\Delta$. In Theorem \ref{thm:band_structure-of_Laplacian},
for each $N$ large enough, we define $\mathfrak{P}_{0}$ the spectral
projector of the rough Laplacian on its first band $k=0$. See also
Figure \ref{fig:Spectrum-of-the_Laplacian}. The projector $\mathfrak{P}_{0}$
has finite rank, the same rank as the spectral projector $\Pi_{\hbar}$
on the external band of $\hat{F}_{N}$. There is some major difference
between these projectors even if they have same rank: $\Pi_{\hbar}$
depends on the map $f$ and its image $\mathrm{Im}\left(\Pi_{\hbar}\right)\subset\mathcal{D}_{N}^{'}\left(P\right)$
consists of distributions whereas the projector $\mathfrak{P}_{0}$
does not depend on the map $f$ and its image $\mathrm{Im}\left(\mathfrak{P}_{0}\right)\subset C_{N}^{\infty}\left(P\right)$
contains smooth sections. The projector $\mathfrak{P}_{0}$ depends
only on the symplectic space $\left(M,\omega\right)$ together with
an additional compatible metric $g$. The following definition is
standard in ``geometric quantization'' \cite{ma_08_toeplitz}.

\begin{center}{\color{red}\fbox{\color{black}\parbox{16cm}{
\begin{defn}
\label{def_toeplitz}For every $N$ large enough, the \textbf{Toeplitz
quantum space} $\mathcal{H}_{T}$ is the finite dimensional space
\begin{equation}
\mathcal{H}_{T}:=\mbox{Im}\left(\mathfrak{P}_{0}\right)\label{eq:def_Toeplitz_quantum_space}
\end{equation}
The \textbf{Toeplitz quantum operator} is 
\begin{equation}
\hat{\mathcal{F}}_{T,V}:=\mathfrak{P}_{0}\hat{F}_{N}\mathfrak{P}_{0}\quad:\mathcal{H}_{T}\rightarrow\mathcal{H}_{T}\label{eq:Toeplitz_quantization}
\end{equation}
In the notation we have emphasized the dependence on the potential
$V$ which enters in the definition \ref{eq:def_quantum_op} of $\hat{F}_{N}$.
\end{defn}
}}}\end{center}

In (\ref{eq:def_isomorphism-1}) we obtain the following Lemma:
\begin{lem}
For $N=\frac{1}{2\pi\hbar}$ large enough we have that 
\begin{equation}
\Phi:=\mathfrak{P}_{0}:\mathcal{H}_{\hbar}\rightarrow\mathcal{H}_{T}\label{eq:def_isomorphism}
\end{equation}
is a finite rank isomorphism.
\end{lem}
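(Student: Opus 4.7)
The plan is to exploit the fact that both $\mathcal{H}_\hbar$ and $\mathcal{H}_T$ are finite dimensional with the \emph{same} dimension: indeed, by Theorem \ref{thm:Weyl-formula} we have $\dim \mathcal{H}_\hbar = \int_M [e^{N\omega}\mathrm{Todd}(TM)]_{2d}$, and the very quote preceding Definition \ref{def_toeplitz} asserts that $\mathfrak{P}_0$ has the same rank as $\Pi_\hbar$ (this being one of the payoffs of Theorem \ref{thm:band_structure-of_Laplacian}, which computes $\dim \mathcal{H}_T$ by Atiyah--Singer). Consequently, it suffices to show that the linear map $\Phi = \mathfrak{P}_0|_{\mathcal{H}_\hbar}$ is injective for $N$ large enough, since a surjective map between finite dimensional spaces of the same dimension is automatically an isomorphism.

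To prove injectivity, I would establish that the ``return map''
\[
\Pi_\hbar \circ \mathfrak{P}_0 \circ \Pi_\hbar : \mathcal{H}_\hbar \to \mathcal{H}_\hbar
\]
is invertible (in fact, close to the identity $\Pi_\hbar$ on $\mathcal{H}_\hbar$ modulo $O(\hbar^\varepsilon)$). Any $v \in \mathcal{H}_\hbar$ in $\ker \Phi$ satisfies $\mathfrak{P}_0 v = 0$, hence $\Pi_\hbar \mathfrak{P}_0 \Pi_\hbar v = 0$, so invertibility of the return map forces $v = 0$. The key semiclassical input is the microlocal description of both projectors: by Theorem \ref{prop:express_Op_Psi} applied to $\psi \equiv 1$, the projector $\Pi_\hbar$ admits (up to $O(\hbar^\varepsilon)$) the wave-packet decomposition $\Pi_\hbar \approx (2\pi\hbar)^{-d} \int_M \pi_x\, dx$ where each $\pi_x$ is microlocally concentrated at $x \in M \subset T^*M$ (the trapped set / zero section). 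On the other hand, the Toeplitz projector $\mathfrak{P}_0$ is known from the spectral analysis of the rough Laplacian $\Delta = D^*D$ carried out for Theorem \ref{thm:band_structure-of_Laplacian} to be microlocally concentrated at the same zero section of $T^*M$, with a smooth Bergman-type kernel. Since both projectors ``live'' over the same Lagrangian $K$, their composition $\Pi_\hbar \mathfrak{P}_0 \Pi_\hbar$ acts on each wave packet $\pi_x$ as (approximately) the identity to leading order in $\hbar$.

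More concretely, I would argue that for each $x \in M$ the rank-one projector $\pi_x$ appearing in Theorem \ref{prop:express_Op_Psi} satisfies $\mathfrak{P}_0 \pi_x \neq 0$ and, when composed back with $\Pi_\hbar$, returns $\pi_x$ up to a scalar $1 + O(\hbar^\varepsilon)$; integrating over $x \in M$ and using (\ref{eq:Op_and_intgral_wavepckets}) yields $\Pi_\hbar \mathfrak{P}_0 \Pi_\hbar = \Pi_\hbar + O(\hbar^\varepsilon)$ in operator norm on $\mathcal{H}_N^r(P)$. Restricted to the finite dimensional space $\mathcal{H}_\hbar$, a Neumann series argument then shows this operator is invertible for $N$ sufficiently large, which completes the injectivity argument.

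The main obstacle will be the analytic asymmetry between the two spaces: $\mathcal{H}_\hbar \subset \mathcal{D}'_N(P)$ consists of genuine (non-smooth) distributions concentrated microlocally along the unstable direction, whereas $\mathcal{H}_T \subset C_N^\infty(P)$ consists of smooth sections. One must therefore check that applying the smoothing operator $\mathfrak{P}_0$ to an element of $\mathcal{H}_\hbar$ does not wash out the information carried by its distributional singularities — equivalently, that the overlap between the Bergman-type kernel of $\mathfrak{P}_0$ and the anisotropic wave packets $\varphi_x, \psi_x$ defining $\pi_x$ is non-degenerate to leading order in $\hbar$. This is precisely where the compatibility between the symplectic structure on $M$ and the metric $g$ used to build $\Delta$ enters; the relevant non-degeneracy is exactly the statement encoded in the referenced equation (\ref{eq:def_isomorphism-1}).
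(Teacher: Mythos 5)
Your overall strategy is the same as the paper's: establish $\dim\mathcal{H}_{\hbar}=\dim\mathcal{H}_{T}$ and then show injectivity of $\mathfrak{P}_{0}\mid_{\mathcal{H}_{\hbar}}$ by proving that the ``return map'' is invertible. That part is sound, and identifying this composition as the crux matches Lemma \ref{lem:Relation_between_ranks}, which the paper uses to establish (\ref{eq:def_isomorphism-1}).

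However, the concrete claim around which you organize the argument is false: it is \emph{not} true that $\Pi_{\hbar}\circ\mathfrak{P}_{0}\circ\Pi_{\hbar}=\Pi_{\hbar}+O(\hbar^{\varepsilon})$, nor that each wave packet $\pi_{x}$ returns ``up to a scalar $1+O(\hbar^{\varepsilon})$''. The two projectors are built from genuinely different Gaussian structures on $(T_{\rho}K)^{\perp}$: $\mathfrak{P}_{0}$ comes from the compatible metric $g$ (Lemma \ref{lm:QTproj}, Proposition \ref{pp1-3}), while $\Pi_{\hbar}\approx\tau^{(0)}$ is built from the anisotropic Sobolev structure tied to the stable/unstable splitting. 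The overlap between them is a genuine Gaussian integral that depends on the angle between these Lagrangian structures, and it does not approach $1$. Already in the \emph{exact} linear model one sees $T^{(0)}Q^{(0)}T^{(0)}=2^{d/2}\cdot\mathrm{Id}$ on $\mathrm{Im}\,T^{(0)}$ (from $\langle 1,\varphi_{0}\rangle/\langle\varphi_{0},\varphi_{0}\rangle\cdot\varphi_{0}(0)$), a fixed constant $\neq1$. More generally the factor is a nonconstant function of $x$ encoding exactly the kind of overlap that the paper computes in Section \ref{sec:6} (the determinant ratio appearing in $c(A)$). So a Neumann-series expansion about $\Pi_{\hbar}$ would not converge, and your proof as written does not close.

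What \emph{is} true, and what the paper uses, is weaker but sufficient: the return maps $\tau^{(0)}\lambda^{(0)}\tau^{(0)}$ and $\lambda^{(0)}\tau^{(0)}\lambda^{(0)}$ are uniformly bounded above and below (away from $0$) independently of $\hbar$. This is the content of Proposition \ref{pp1-3}, proved by reducing to the purely algebraic Lemma \ref{lm:QTproj} about harmonic-oscillator spectral projectors vs.\ Taylor projectors in the linear model and then gluing local charts, exactly as in Section \ref{sub:Proof-of-Laplacian}. A uniform two-sided bound on a composition of linear maps between finite-dimensional spaces of the same dimension already forces the factors to be isomorphisms, with no need to compare against the identity. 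To repair your argument, replace ``$=\Pi_{\hbar}+O(\hbar^{\varepsilon})$'' by the assertion that $\Pi_{\hbar}\mathfrak{P}_{0}\Pi_{\hbar}\mid_{\mathcal{H}_{\hbar}}$ is bounded below in norm by a constant $c>0$ independent of $\hbar$, obtain this from Proposition \ref{pp1-3} and the $O(\hbar^{\varepsilon})$-comparison $\|\Pi_{\hbar}-\tau^{(0)}\|\le C\hbar^{\varepsilon}$ of (\ref{eq:diff_projectors}) (and the analogous comparison between $\mathfrak{P}_{0}$ and $\lambda^{(0)}$), and then conclude injectivity directly.
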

The previous Lemma allows to consider $\Phi\hat{\mathcal{F}}_{\hbar}\Phi^{-1}:\mathcal{H}_{T}\rightarrow\mathcal{H}_{T}$
and thus to ``compare'' the quantum operator $\hat{\mathcal{F}}_{\hbar}:\mathcal{H}_{\hbar}\rightarrow\mathcal{H}_{\hbar}$
with the Toeplitz quantum operator $\hat{\mathcal{F}}_{T}:\mathcal{H}_{T}\rightarrow\mathcal{H}_{T}$.
The next Theorem shows that they are close to each other under the
condition that one adds a correction in the potential function. For
that reason we will emphasize the dependence on the potential $V$
which enters in the definition of $\hat{\mathcal{F}}_{\hbar}$ by
noting it $\hat{\mathcal{F}}_{\hbar,V}$.

\begin{center}{\color{blue}\fbox{\color{black}\parbox{16cm}{
\begin{thm}
\textbf{\label{thm:The-quantum-operator_closed_to_Toeplitz}``The
quantum operator $\hat{\mathcal{F}}_{\hbar}$ is close to a Toeplitz
quantum operator $\hat{\mathcal{F}}_{T}$''}. There exists $\delta>0$,
there exists a function $\mathcal{M}\in C\left(M\right)$ such that
for $\hbar\rightarrow0$, 
\begin{equation}
\left\Vert \Phi\hat{\mathcal{F}}_{\hbar,V}\Phi^{-1}-\hat{\mathcal{F}}_{T,V'}\right\Vert \leq\mathcal{O}\left(\hbar^{\delta}\right),\label{eq:quantum_operator_closed_to_Toeplitz}
\end{equation}
where $\hat{\mathcal{F}}_{T,V'}:\mathcal{H}_{T}\rightarrow\mathcal{H}_{T}$
is the Toeplitz operator (\ref{eq:Toeplitz_quantization}) but constructed
with the potential $V'=V+\mathcal{M}$.
\end{thm}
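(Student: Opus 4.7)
The overall strategy is to exploit the fact that both $\Pi_{\hbar}$ and $\mathfrak{P}_{0}$ are finite-rank projectors of the same rank (by the Atiyah--Singer formula in Theorems \ref{thm:Weyl-formula} and \ref{thm:weyl_law_extended}) and that each is, at leading order in $\hbar$, a Toeplitz--Bergman-type projector onto a family of microlocally concentrated ``coherent states'' parametrized by $M$. I will transport everything to $\mathcal{H}_{T}$ via $\Phi=\mathfrak{P}_{0}$ and show that the difference between $\Phi\hat{\mathcal{F}}_{\hbar,V}\Phi^{-1}$ and $\hat{\mathcal{F}}_{T,V}$ is, at leading order, a \emph{local multiplication} by some continuous function $e^{\mathcal{M}}$, which is then absorbed into a redefinition of the potential $V'=V+\mathcal{M}$.

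First I would establish an analog of Theorem \ref{prop:express_Op_Psi} on the Toeplitz side: there exists a smooth family of rank-one projectors $\pi_{x}^{T}:\mathcal{H}_{T}\to\mathcal{H}_{T}$, $x\in M$, such that for any symbol $\psi\in S_{\delta}$
\[
\Big\|\mathfrak{P}_{0}\,\mathcal{M}(\psi)\,\mathfrak{P}_{0}-\frac{1}{(2\pi\hbar)^{d}}\int_{M}\psi(x)\,\pi_{x}^{T}\,dx\Big\|\leq C\hbar^{\varepsilon}.
\]
This is the classical Bordemann--Meinrenken--Schlichenmaier/Boutet de Monvel--Guillemin semiclassical calculus for Toeplitz operators, and in our setting it can be read off from the spectral decomposition of the rough Laplacian $\Delta$ established in Section \ref{sec:Covariant-derivative-}. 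Together with Theorem \ref{prop:express_Op_Psi}, this provides two parallel integral representations, one using the dynamical wave packets $\pi_{x}$ and one using the geometric coherent states $\pi_{x}^{T}$.

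Next I would compare the transported projectors $\Phi\pi_{x}\Phi^{-1}$ with $\pi_{x}^{T}$. Both are rank-one operators on $\mathcal{H}_{T}$ that are microlocally concentrated at the same point $x\in M$ (the microlocal concentration of $\pi_{x}$ follows from the construction in the proof of Theorem \ref{prop:express_Op_Psi}, and that of $\pi_{x}^{T}$ is standard). Any two rank-one projectors with the same microlocal location must agree up to a scalar multiple modulo a semiclassical remainder; computing the proportionality from the inner product of the associated wave packets yields a continuous function $\rho(x)>0$ with
\[
\Phi\,\pi_{x}\,\Phi^{-1}=\rho(x)\,\pi_{x}^{T}+O(\hbar^{\varepsilon})\quad\text{in operator norm, uniformly in }x.
\]
Set $\mathcal{M}(x):=\log\rho(x)$; its continuity (rather than smoothness) reflects the H\"older regularity of the stable/unstable foliations entering the definition of $\pi_{x}$.

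To conclude, I factor $\hat{F}_{N}=\mathcal{M}(e^{V})\cdot\hat{F}_{N}^{(0)}$ where $\hat{F}_{N}^{(0)}$ is the prequantum transfer operator with zero potential. Using the exact Egorov formula (Proposition \ref{prop:Exact-Egorov-formula.}) and Corollary \ref{cor:Composition-formula},
\[
\Phi\,\hat{\mathcal{F}}_{\hbar,V}\,\Phi^{-1}=\Phi\,\mathrm{Op}_{\hbar}(e^{V})\,\hat{\mathcal{F}}_{\hbar,0}\,\Phi^{-1}+O(\hbar^{\varepsilon}),
\]
and inserting the integral representations together with $\Phi\pi_{x}\Phi^{-1}\approx\rho(x)\pi_{x}^{T}$ converts the multiplication by $e^{V}$ on the dynamical side into multiplication by $e^{V+\mathcal{M}}$ on the Toeplitz side. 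Applying the same comparison to $\Phi\hat{\mathcal{F}}_{\hbar,0}\Phi^{-1}$ versus $\mathfrak{P}_{0}\hat{F}_{N}^{(0)}\mathfrak{P}_{0}$ finishes the identification with $\hat{\mathcal{F}}_{T,V+\mathcal{M}}$. The main obstacle is the rigorous and uniform-in-$x$ identification of $\mathcal{M}$: one must match two different Bergman-type kernels (one defined from the hyperbolic splitting $E_{u}\oplus E_{s}$, the other from the almost-complex structure compatible with $\omega$) in operator norm, not merely in trace norm, and show that the resulting correction is a well-defined continuous function on $M$ rather than a mere cocycle of local approximations.
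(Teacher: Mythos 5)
Your proposal has a genuine gap at the central step. You claim that for each $x\in M$ one can find a positive scalar $\rho(x)$, in general different from $1$, such that
\[
\Phi\,\pi_{x}\,\Phi^{-1}=\rho(x)\,\pi_{x}^{T}+O(\hbar^{\varepsilon})
\]
in operator norm, and you then define $\mathcal{M}(x)=\log\rho(x)$. This cannot hold. The operator $\pi_{x}$ is a rank-one projector (hence idempotent with trace $1$), so its conjugate $\Phi\,\pi_{x}\,\Phi^{-1}$ is again a rank-one idempotent with trace $1$; the Toeplitz-side operator $\pi_{x}^{T}$ is also a rank-one projector with trace $1$. Taking traces of both sides of the displayed identity gives $1=\rho(x)+O(\hbar^{\varepsilon})$, so necessarily $\rho(x)=1+O(\hbar^{\varepsilon})$ and hence $\mathcal{M}(x)=O(\hbar^{\varepsilon})$. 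That is, your argument would show that \emph{no} correction of the potential is needed in the semiclassical limit, which contradicts the theorem: the function $\mathcal{M}$ produced in Section \ref{sec:6} is a fixed, generally nonzero, continuous function built from $A_{x}=Df|_{E_{u}(x)}$.

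The multiplicative correction does not come from a scalar discrepancy between the two families of wave-packet projectors (which, as observed, is forced to be trivial). It comes from the fact that the transfer operator $L_{A}$ has \emph{different} diagonal matrix elements in the two one-dimensional ranges. The paper's proof works on the local linear model through the tensor-product decomposition $\mathcal{U}\circ(\cdot\otimes\cdot)\circ\mathcal{U}^{-1}$: the dynamical projector corresponds to $T^{(0)}$ (projection onto the constant polynomial), for which $T^{(0)}L_{A}T^{(0)}=T^{(0)}$ (and the overall normalization contributes $|\det A|^{-1/2}$), whereas the Toeplitz projector corresponds to $Q_{\hbar}^{(0)}$ (projection onto the Gaussian ground state of the harmonic oscillator), for which $Q_{\hbar}^{(0)}L_{A}Q_{\hbar}^{(0)}=c(A)\,Q_{\hbar}^{(0)}$ with an explicitly computed $c(A)\neq|\det A|^{-1/2}$; one then uses the bijections $T^{(0)}:\mathrm{Im}\,Q^{(0)}\to\mathrm{Im}\,T^{(0)}$ and $Q^{(0)}:\mathrm{Im}\,T^{(0)}\to\mathrm{Im}\,Q^{(0)}$ (Lemma \ref{lm:QTproj}) to identify $\Phi\hat{\mathcal{F}}_{\hbar}\Phi^{-1}=e^{\mathcal{M}}\hat{\mathcal{F}}_{T}$ exactly in the linear model, with $\mathcal{M}=-\tfrac{1}{2}\log|\det A|-\log c(A)$, and finally glues over the $\hbar$-dependent atlas so that nonlinear errors contribute $O(\hbar^{\delta})$. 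If you want to salvage your wave-packet picture, the object you should compare is not $\Phi\pi_{x}\Phi^{-1}$ against $\pi_{x}^{T}$ but $\pi_{x}\,\hat{F}_{N}\,\pi_{f^{-1}(x)}$ against $\pi^{T}_{x}\,\hat{F}_{N}\,\pi^{T}_{f^{-1}(x)}$ (after transport by $\Phi$); the amplitudes of these rank-one blocks, which encode the overlap of the propagated wave packet with the wave packet at the image point, \emph{do} differ by the factor $e^{\mathcal{M}(x)}$. As written, your proof establishes only the weaker (and here insufficient) fact that $\Phi\pi_{x}\Phi^{-1}\approx\pi_{x}^{T}$.
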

}}}\end{center}

The proof of Theorem \ref{thm:The-quantum-operator_closed_to_Toeplitz}
is given in Section \ref{sec:6}.
\begin{rem}
We have already said but we repeat here, that Toeplitz quantization
(or geometric quantization%
\footnote{or Weyl quantization when available e.g. if $M=\mathbb{T}^{2d}$ is
a torus.%
}) gives a family of ``quantum operators'' $\hat{\mathcal{F}}_{T,V'}$
that have nice semiclassical properties as any quantization gives.
For example they satisfy Egorov formula (\ref{eq:exact_Egorov_formula})
but with an error of order $O\left(h\right)$. They satisfy the Gutzwiller
trace formula (\ref{eq:Gutz_Trace_Formula}) but with a larger error
$O\left(\theta^{n}\right)$ with $\theta>1$. The ``natural quantization''
of the Anosov map $f$ presented in this paper has the unique additional
property that the errors in these formula vanish (or are improved). \end{rem}

\section{\label{sec:Preliminary-results-and}Preliminary results and sketch
of the proofs}

In this Section we begin with establishing some preliminary results
which will be used in the rest of the paper. Then we sketch the proofs
of the main theorems presented in Section \ref{sec:Introduction-and-results}.

\subsection{\label{sub:Semiclassical-description-of-prequantum_op}Semiclassical
description of the prequantum operator $\hat{F}_{N}$}

\subsubsection{The associated canonical map $F:T^{*}M\rightarrow T^{*}M$}

We first give a local expression of the transfer operator $\hat{F}_{N}$
defined in (\ref{eq:def_F_N_on_P}) with respect to local charts and
local trivialization of the bundle $P$. These local expressions will
be useful in the sequel of the paper.

As in Section \ref{sub:Hypothesis}, let $\left(U_{\alpha}\right)_{\alpha\in I}$
be a finite collection of simply connected open subsets which cover
$M$ and, for every open set $U_{\alpha}\subset M$, let
\[
\tau_{\alpha}:U_{\alpha}\rightarrow P
\]
be a local section of the bundle. Recall the local trivialization
$T_{\alpha}$ defined in (\ref{eq:local_trivialization-1}) and the
one forms $\eta_{\alpha}$ in the local expression (\ref{eq:def_eta-1})
of the connection $A$.

\begin{center}{\color{blue}\fbox{\color{black}\parbox{16cm}{
\begin{lem}
\textbf{``Local expression of the prequantum map $\tilde{f}$''.
}Suppose that $V\subset U_{\alpha}\cap f^{-1}\left(U_{\beta}\right)$
is a simply connected open set. Then for $x\in V$
\begin{equation}
\tilde{f}\left(\tau_{\alpha}\left(x\right)\right)=e^{i2\pi\mathcal{A}_{\beta,\alpha}\left(x\right)}\tau_{\beta}\left(f\left(x\right)\right)\label{eq:f_tilde_local}
\end{equation}
with the ``action function'' given by
\begin{equation}
\mathcal{A}_{\beta,\alpha}\left(x\right)=\int_{f\left(\gamma\right)}\eta_{\beta}-\int_{\gamma}\eta_{\alpha}+c\left(x_{0}\right)=\int_{\gamma}\left(f^{*}\left(\eta_{\beta}\right)-\eta_{\alpha}\right)+c\left(x_{0}\right).\label{eq:def_A_action}
\end{equation}
In the last integral, $x_{0}\in V$ is any point of reference, $\gamma\subset V$
is a path from $x_{0}$ to $x$ and $c\left(x_{0}\right)$ does not
depend on $x$. See figure \ref{fig:map f_tilde-1}.
\end{lem}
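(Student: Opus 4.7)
The plan is to recover $\mathcal{A}_{\beta,\alpha}$ by differentiating and using the key property (\ref{eq:preserve_connection}) that $\tilde{f}$ preserves the connection $A$. First I would remark that since $\tilde{f}$ is a lift of $f$ (see (\ref{eq:lift_of_f})), the point $\tilde{f}(\tau_\alpha(x))$ lies in the fiber $\pi^{-1}(f(x))$ for every $x\in V$; this fiber also contains $\tau_\beta(f(x))$, so by the freeness of the $\mathbf{U}(1)$ action (\ref{eq:action_U1}) there is a unique function $\mathcal{A}_{\beta,\alpha}\colon V\to\mathbb{R}/\mathbb{Z}$ such that $\tilde{f}(\tau_\alpha(x))=e^{i2\pi\mathcal{A}_{\beta,\alpha}(x)}\tau_\beta(f(x))$. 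Because $V$ is simply connected and $\tilde{f},\tau_\alpha,\tau_\beta,f$ are smooth, $\mathcal{A}_{\beta,\alpha}$ lifts to a smooth $\mathbb{R}$-valued function, unique up to an additive integer constant which will absorb $c(x_0)$.

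The second step is to pull back the connection $A$ through the section $\sigma:=\tilde{f}\circ\tau_\alpha\colon V\to P$ in two different ways. On one hand, (\ref{eq:preserve_connection}) gives
\[
\sigma^{*}A=\tau_{\alpha}^{*}(\tilde{f}^{*}A)=\tau_{\alpha}^{*}A=-i\,2\pi\,\eta_{\alpha},
\]
where the last equality follows from (\ref{eq:def_eta-1}) evaluated along the zero section $\theta=0$ of the trivialization $T_\alpha$. On the other hand, writing $\sigma(x)=T_{\beta}(f(x),e^{i2\pi\mathcal{A}_{\beta,\alpha}(x)})$ and using (\ref{eq:def_eta-1}) for $T_\beta^{*}A=id\theta-i2\pi\eta_{\beta}$ gives
\[
\sigma^{*}A=i\,2\pi\,d\mathcal{A}_{\beta,\alpha}-i\,2\pi\,f^{*}\eta_{\beta}.
\]
Equating the two expressions yields the pointwise identity $d\mathcal{A}_{\beta,\alpha}=f^{*}\eta_{\beta}-\eta_{\alpha}$ on $V$.

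The third step is merely to integrate this exact one-form along any path $\gamma\subset V$ from a fixed basepoint $x_{0}$ to $x$. Since $V$ is simply connected, the result is path-independent and gives exactly the formula (\ref{eq:def_A_action}), with $c(x_0):=\mathcal{A}_{\beta,\alpha}(x_0)$ being the integration constant (only defined modulo $\mathbb{Z}$, reflecting the ambiguity in the lift of $\mathcal{A}_{\beta,\alpha}$ to $\mathbb{R}$). There is no genuine obstacle; the only subtlety to handle carefully is the bookkeeping of signs and the factors of $2\pi$ in the conventions (\ref{eq:normalization_A})--(\ref{eq:def_eta-1}) when differentiating the $e^{i2\pi\mathcal{A}_{\beta,\alpha}}$ factor through the $\mathbf{U}(1)$-action.
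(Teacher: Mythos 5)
Your proof is correct, and it takes a genuinely different route from the paper's. The paper argues via parallel transport: it picks a path $\gamma$ from $x_0$ to $x$, constructs the horizontal lift $\tilde\gamma$ through $\tau_\alpha(x_0)$, uses that $\tilde f$ maps horizontal lifts to horizontal lifts (since it preserves $A$), and reads off the phase discrepancy between $p=\tilde\gamma(1)$ and $\tau_\alpha(x)$ (respectively between the lift of $f(\gamma)$ and $\tau_\beta(f(x))$) as path integrals $\int_\gamma\eta_\alpha$ and $\int_{f(\gamma)}\eta_\beta$. In effect the paper re-runs the explicit construction of $\tilde f$ from the proof of Lemma \ref{pro:map_f_tilde-1}. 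You instead argue infinitesimally and axiomatically: defining $\sigma:=\tilde f\circ\tau_\alpha$, you compute $\sigma^*A$ once via $\tilde f^*A=A$ (getting $-i2\pi\eta_\alpha$) and once via the $T_\beta$ trivialization together with the ansatz $\sigma=e^{i2\pi\mathcal A_{\beta,\alpha}}\tau_\beta\circ f$ (getting $i2\pi\,d\mathcal A_{\beta,\alpha}-i2\pi f^*\eta_\beta$), then equate and integrate. Your approach never needs the horizontal-lift construction of $\tilde f$ — only the three defining properties (\ref{eq:lift_of_f}), (\ref{eq:equivariance_f_tilde}), (\ref{eq:preserve_connection}) — and it makes the exactness relation $d\mathcal A_{\beta,\alpha}=f^*\eta_\beta-\eta_\alpha$ explicit, which the paper's proof leaves implicit. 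The tradeoff is that the paper's version gives directly the interpretation of $\mathcal A_{\beta,\alpha}$ as a holonomy defect measured by transporting along $\gamma$ versus $f(\gamma)$, which is the picture the rest of the paper leans on (e.g.\ the action $S_{n,x}$ of periodic orbits in Remark \ref{Rem:Action_of_PerOrb}). Both proofs are sound; the bookkeeping of signs and $2\pi$'s in your computation checks out.
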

}}}\end{center}

\begin{figure}[h]
\centering{}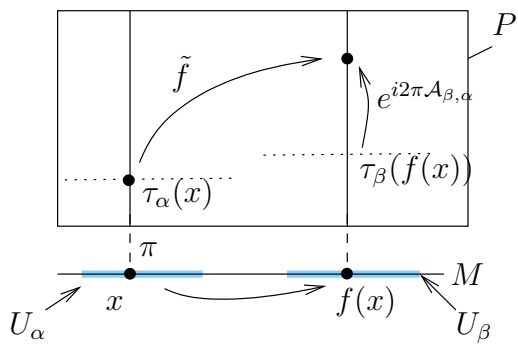\caption{\label{fig:map f_tilde-1}Illustrates the expression (\ref{eq:f_tilde_local})
of the prequantum map $\tilde{f}$ with respect to local trivialization.
It is characterized by the action function $\mathcal{A}_{\beta,\alpha}\left(x\right)$.}
\end{figure}

\begin{rem}
In other terms, if $p\in P$ is such that $\pi\left(p\right)=x$,
let $\theta_{\alpha},\theta_{\beta}$ such that
\[
p=e^{i\theta_{\alpha}}\tau_{\alpha}\left(x\right),\quad\tilde{f}\left(p\right)=e^{i\theta_{\beta}}\tau_{\beta}\left(f\left(x\right)\right)
\]
then
\[
\theta_{\beta}=\theta_{\alpha}+2\pi\mathcal{A}_{\beta,\alpha}\left(x\right)
\]

\end{rem}
~
\begin{rem}
Notice that the integral (\ref{eq:def_A_action}) does not depend
on the path $\gamma$ from $x_{0}$ to $x$ because the one form $f^{*}\left(\eta_{\beta}\right)-\eta_{\alpha}$
is closed. Indeed $d\left(f^{*}\left(\eta_{\beta}\right)-\eta_{\alpha}\right)=f^{*}\omega-\omega=0$
since $f$ is symplectic.\end{rem}
\begin{proof}
Let $\gamma\subset V$ be a path from $x_{0}$ to $x$. Let $\tilde{\gamma}:t\rightarrow\tilde{\gamma}\left(t\right)$
be the lifted path parallel transported above $\gamma$ starting from
$\tau_{\alpha}\left(x_{0}\right)$ and ending at point $p$. (See
Figure \ref{fig:Local expression of FN}.) Since the connection one
form vanishes along the path $\tilde{\gamma}$, we have
\[
0=\left(T_{\alpha}^{*}A\right)\left(\frac{d\tilde{\gamma}_{\alpha}}{dt}\right)=\left(id\theta-i2\pi\eta_{\alpha}\right)\left(\frac{d\tilde{\gamma}_{\alpha}}{dt}\right)
\]
with $\tilde{\gamma}_{\alpha}=T_{\alpha}^{-1}\left(\tilde{\gamma}\right)$.
From the construction of the lifted map $\tilde{f}$ in the proof
of Lemma \ref{pro:map_f_tilde-1}, we have
\begin{equation}
p=e^{i\theta_{\alpha}\left(x\right)}\tau_{\alpha}\left(x\right)\label{eq:e1}
\end{equation}
with 
\[
\theta_{\alpha}\left(x\right)=\int_{\tilde{\gamma}}d\theta=2\pi\int_{\gamma}\eta_{\alpha}.
\]

\begin{figure}[h]
\centering{}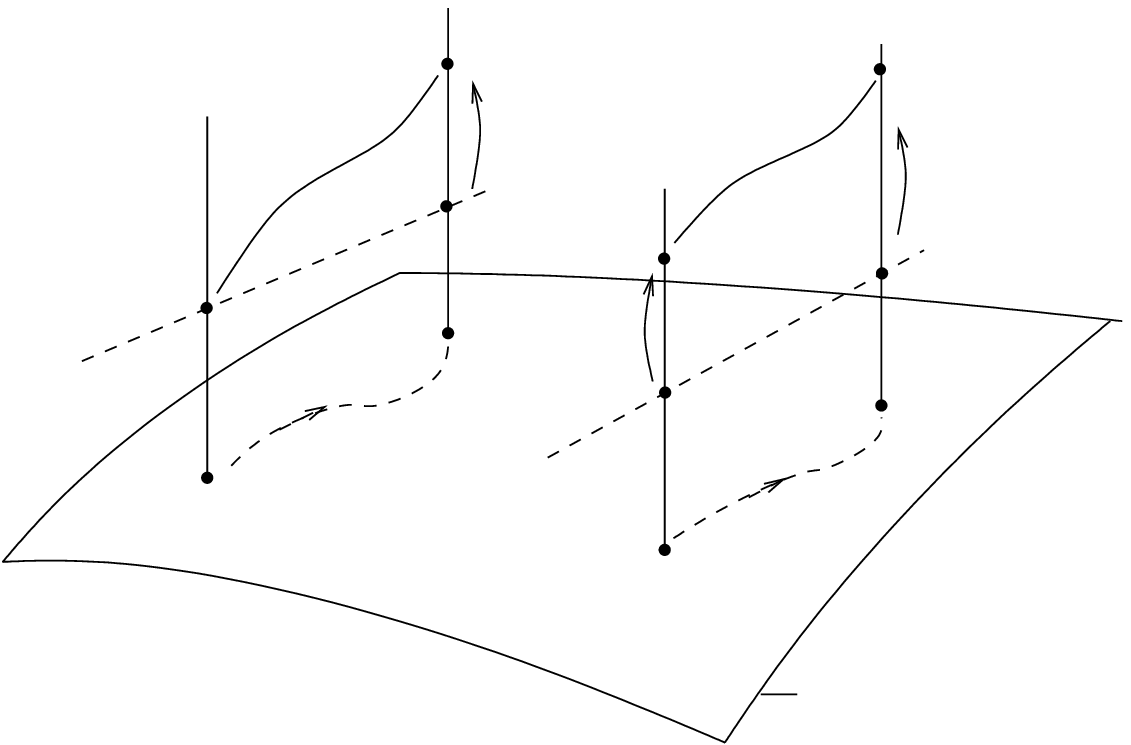\caption{\label{fig:Local expression of FN}}
\end{figure}
Let $\theta_{0}$ given by $\tilde{f}\left(\tau_{\alpha}\left(x_{0}\right)\right)=e^{i\theta_{0}}\tau_{\beta}\left(f\left(x_{0}\right)\right)$.
Similarly we have
\[
\tilde{f}\left(p\right)=e^{i\theta_{0}}e^{ib\left(x\right)}\tau_{\beta}\left(f\left(x\right)\right)
\]
with
\[
b\left(x\right)=2\pi\int_{f\left(\gamma\right)}\eta_{\beta}.
\]
From equivariance of $\tilde{f}$ and (\ref{eq:e1}), we have $\tilde{f}\left(p\right)=e^{i\theta_{\alpha}\left(x\right)}\tilde{f}\left(\tau_{\alpha}\left(x\right)\right)$.
Therefore
\[
\tilde{f}\left(\tau_{\alpha}\left(x\right)\right)=e^{-i\theta_{\alpha}\left(x\right)}\tilde{f}\left(p\right)=e^{-i\theta_{\alpha}\left(x\right)}e^{i\theta_{0}}e^{ib\left(x\right)}\tau_{\beta}\left(f\left(x\right)\right)=e^{i2\pi\mathcal{A}_{\beta,\alpha}\left(x\right)}\tau_{\beta}\left(f\left(x\right)\right)
\]
with 
\[
\mathcal{A}_{\beta,\alpha}\left(x\right)=\int_{f\left(\gamma\right)}\eta_{\beta}-\int_{\gamma}\eta_{\alpha}+\frac{\theta_{0}}{2\pi}.
\]

\end{proof}
For a given $N\in\mathbb{Z}$, an equivariant function $u\in C_{N}^{\infty}\left(P\right)$
defines the set of \emph{associated functions}%
\footnote{In the language of associate line bundle, these functions $u_{\alpha}$
are sections of a line bundle $L^{\otimes N}$ expressed with respect
to the local trivializations. %
} $u_{\alpha}:U_{\alpha}\rightarrow\mathbb{C}$, $\alpha\in I$, defined
by 
\begin{equation}
u_{\alpha}\left(x\right):=u\left(\tau_{\alpha}\left(x\right)\right)\label{eq:def_u_alpha}
\end{equation}
for $x\in U_{\alpha}$. Conversely one reconstructs $u$ from $(u_{\alpha})_{\alpha\in I}$
by the relation 
\[
u\left(p\right)=u\left(e^{i\theta}\tau_{\alpha}\left(x\right)\right)=e^{iN\theta}u\left(\tau_{\alpha}\left(x\right)\right)=e^{iN\theta}u_{\alpha}\left(x\right)\quad\mbox{for }p=e^{i\theta}\tau_{\alpha}\left(x\right)\mbox{ and }x\in U_{\alpha}.
\]

\begin{center}{\color{blue}\fbox{\color{black}\parbox{16cm}{
\begin{prop}
\textbf{\label{prop:Local-expression-of_FN}``Local expression of
$\hat{F}_{N}$''.} Let $u\in C_{N}^{\infty}\left(P\right)$ and $u':=\hat{F}_{N}u\quad\in C_{N}^{\infty}\left(P\right)$.
Let the respective associated functions be $u_{\alpha}=u\circ\tau_{\alpha}$
and $u'_{\alpha}=u'\circ\tau_{\alpha}$ for any $\alpha\in I$. Then
\begin{equation}
u'_{\beta}=e^{V}\cdot e^{-i2\pi N\mathcal{A}_{\beta,\alpha}\circ f^{-1}}\left(u_{\alpha}\circ f^{-1}\right)\label{eq:expression_op_locale}
\end{equation}

\end{prop}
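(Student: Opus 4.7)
The proof is a direct unpacking of the definitions, so the plan is essentially a short bookkeeping exercise relating three ingredients: the definition of $\hat F$ on $C^\infty(P)$, the equivariance of $u \in C_N^\infty(P)$, and the local formula (\ref{eq:f_tilde_local}) for $\tilde f$. No analytic obstacle arises; the only point to watch is not to confuse $x$ with $f(x)$ when expressing $u'_\beta$ at a point of $U_\beta$.

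Concretely, I would evaluate $u'_\beta$ at a point of the form $y = f(x)$ with $x \in V \subset U_\alpha \cap f^{-1}(U_\beta)$. By the definition of the associated function in (\ref{eq:def_u_alpha}) and the definition of $\hat F$ in (\ref{eq:def_prequantum_operator_F}),
\[
u'_\beta(f(x)) \;=\; u'\bigl(\tau_\beta(f(x))\bigr) \;=\; e^{V(\pi(\tau_\beta(f(x))))}\,u\bigl(\tilde f^{-1}(\tau_\beta(f(x)))\bigr).
\]
Since $\pi \circ \tau_\beta = \mathrm{Id}_{U_\beta}$, the first factor is simply $e^{V(f(x))}$.

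For the second factor, the key step is to invert the local identity (\ref{eq:f_tilde_local}): it gives
\[
\tilde f^{-1}\bigl(\tau_\beta(f(x))\bigr) \;=\; e^{-i 2\pi \mathcal{A}_{\beta,\alpha}(x)}\,\tau_\alpha(x).
\]
Then the $N$-equivariance (\ref{eq:def_C_N_P}) of $u$ together with (\ref{eq:def_u_alpha}) yields
\[
u\bigl(\tilde f^{-1}(\tau_\beta(f(x)))\bigr) \;=\; e^{-i 2\pi N \mathcal{A}_{\beta,\alpha}(x)}\,u(\tau_\alpha(x)) \;=\; e^{-i 2\pi N \mathcal{A}_{\beta,\alpha}(x)}\,u_\alpha(x).
\]

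Combining the two factors and then making the change of variable $y = f(x)$, i.e.\ $x = f^{-1}(y)$, converts the identity into
\[
u'_\beta(y) \;=\; e^{V(y)}\,e^{-i 2\pi N \mathcal{A}_{\beta,\alpha}(f^{-1}(y))}\,u_\alpha(f^{-1}(y)),
\]
which is exactly (\ref{eq:expression_op_locale}). No step requires more than careful substitution, and the only possible pitfall—the interplay between the base-point shift $x \mapsto f(x)$ and the phase $\mathcal{A}_{\beta,\alpha}$ being evaluated at $x$ and not at $f(x)$—is resolved by the explicit composition with $f^{-1}$ appearing on the right-hand side.
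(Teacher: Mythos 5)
Your proof is correct and takes essentially the same route as the paper: apply the definition of $\hat F$, substitute the local formula (\ref{eq:f_tilde_local}) for $\tilde f$ (equivalently, invert it), and use $N$-equivariance to extract the phase. The only cosmetic difference is that you evaluate at $y=f(x)$ and change variables at the end, whereas the paper evaluates directly at $x$ and carries $f^{-1}(x)$ throughout; the content is identical.
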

}}}\end{center}
\begin{proof}
From definition (\ref{eq:def_prequantum_operator_F}) of the transfer
operator
\[
u'_{\beta}\left(x\right)=u'\left(\tau_{\beta}\left(x\right)\right)=\left(\hat{F}u\right)\left(\tau_{\beta}\left(x\right)\right)=e^{V\left(x\right)}u\left(\tilde{f}^{-1}\left(\tau_{\beta}\left(x\right)\right)\right)
\]
From (\ref{eq:f_tilde_local}) we have
\[
\tilde{f}^{-1}\left(\tau_{\beta}\left(x\right)\right)=e^{-i2\pi\mathcal{A}_{\beta,\alpha}\left(f^{-1}\left(x\right)\right)}\tau_{\alpha}\left(f^{-1}\left(x\right)\right)
\]
hence
\begin{eqnarray*}
u'_{\beta}\left(x\right) & = & e^{V\left(x\right)}u\left(e^{-i2\pi\mathcal{A}_{\beta,\alpha}\left(f^{-1}\left(x\right)\right)}\tau_{\alpha}\left(f^{-1}\left(x\right)\right)\right)\\
 & = & e^{V\left(x\right)}e^{-i2\pi N\mathcal{A}_{\beta,\alpha}\left(f^{-1}\left(x\right)\right)}u\left(\tau_{\alpha}\left(f^{-1}\left(x\right)\right)\right)\\
 & = & e^{V\left(x\right)}e^{-i2\pi N\mathcal{A}_{\beta,\alpha}\left(f^{-1}\left(x\right)\right)}u_{\alpha}\left(f^{-1}\left(x\right)\right)
\end{eqnarray*}

\end{proof}

The next few results will concern some semiclassical aspects of the
transfer operator. The first easy result but of fundamental importance
is the following Proposition. (It is not used in the proofs in this
paper and the explanation below may be a bit sloppy. But it gives
the background of the following argument.)
\begin{rem}
For the general definition of a Fourier integral operator, we refer
to Martinez \cite{martinez-01}, Zworski \cite{zworski-03} or Duistermaat
\cite{duistermaat_FIO_96}. If the reader is not familiar with Fourier
integral operator, it is enough for the reading of this paper to understand
the rough idea of a Fourier integral operator $\hat{F}$, which is
quite simple as explained in \cite{cordoba_78}: if a function $\psi$
is localized at point $x\in M$ and its Fourier transform is localized
at point $\xi\in T_{x}^{*}M$ (which means that these functions decay
fast outside these points and we say that $\psi$ is\textbf{ }\emph{micro-localized}
at $\left(x,\xi\right)\in T^{*}M$ ) then the operator $\hat{F}$
transforms this function $\psi$ to a function $\psi'$ micro-localized
in another point $\left(x',\xi'\right)\in T^{*}M$. The map $F:T^{*}M\rightarrow T^{*}M$,
giving $\left(x',\xi'\right)=F\left(x,\xi\right)$ is called the associated
\textbf{canonical map}. Note that we are interested in the situation
of high frequencies $|\xi|\gtrsim N=(2\pi\hbar)^{-1}\gg1$ and, in
the limit $N\to\infty$ (or $\hbar\to+0$), we will normalize $\xi$
by multiplying $\hbar$.
\end{rem}
\begin{center}{\color{blue}\fbox{\color{black}\parbox{16cm}{
\begin{prop}
\label{prop:The-transfer-operator_is_FIO}The prequantum transfer
operator $\hat{F}_{N}$ is a \emph{Fourier Integral Operator} if we
view it in the local trivializations as in (\ref{eq:expression_op_locale}).
The \emph{associated canonical map} on the cotangent space is given
by
\begin{eqnarray}
F_{\alpha,\beta}:\begin{cases}
T^{*}U_{\alpha} & \rightarrow T^{*}U_{\beta}\\
\left(x,\xi\right) & \rightarrow\left(x',\xi'\right)=\left(f\left(x\right),{}^{t}\left(Df_{x'}^{-1}\right)\left(\xi+\eta_{\alpha}\left(x\right)\right)-\eta_{\beta}\left(x'\right)\right)
\end{cases}\label{eq:expresion_canonical_map_Fab}
\end{eqnarray}

where $x\in U_{\alpha}$, $f\left(x\right)\in U_{\beta}$ and $\xi\in T_{x}^{*}U_{\alpha}$.
The map $F_{\alpha,\beta}$ preserves the canonical symplectic structure
\begin{equation}
\Omega:=\sum_{j=1}^{2d}dx_{j}\wedge d\xi_{j}\label{eq:big_Omega}
\end{equation}

\end{prop}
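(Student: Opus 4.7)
The plan is to read the canonical map directly off the local formula (2.6) for $\hat{F}_N$: since $2\pi N = 1/\hbar$, the identity
$u'_\beta = e^V\cdot e^{-i\mathcal{A}_{\beta,\alpha}\circ f^{-1}/\hbar}\cdot (u_\alpha\circ f^{-1})$
exhibits $\hat{F}_N$ locally as the composition of a pullback by $f^{-1}$, multiplication by a smooth amplitude $e^V$, and a rapidly oscillating phase. This is exactly the template of a Fourier Integral Operator in the sense of \cite{martinez-01,zworski-03,duistermaat_FIO_96}, and the associated canonical transformation is determined by its action on wave packets.

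To compute $F_{\alpha,\beta}$, I apply $\hat{F}_N$ to a test oscillatory function $u_\alpha(y) = a(y)\, e^{iS(y)/\hbar}$ micro-localized at $(x,\xi)\in T^*U_\alpha$, i.e.\ with $dS(x) = \xi$. Then $u'_\beta$ is an oscillatory function with phase $\Phi(y')/\hbar$ where
\[
\Phi(y') = S(f^{-1}(y')) - \mathcal{A}_{\beta,\alpha}(f^{-1}(y')),
\]
so it is micro-localized at $(x',\xi') = (f(x),\, d\Phi(x'))$. The chain rule gives $d(S\circ f^{-1})(x') = {}^t(Df^{-1}_{x'})\xi$. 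From the integral expression (2.3) of the action we read off $d\mathcal{A}_{\beta,\alpha} = f^*\eta_\beta - \eta_\alpha$ (this is where (1.9), i.e.\ $d\eta_\alpha = d\eta_\beta = \omega$, ensures path-independence), which after pulling back by $f^{-1}$ yields $d(\mathcal{A}_{\beta,\alpha}\circ f^{-1})(x') = \eta_\beta(x') - {}^t(Df^{-1}_{x'})\eta_\alpha(x)$. Summing the two contributions gives exactly the formula (2.7) for $\xi'$.

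For the preservation of $\Omega$, I decompose $F_{\alpha,\beta}$ as
\[
F_{\alpha,\beta} = \tau_{-\eta_\beta} \circ L_f \circ \tau_{\eta_\alpha},
\]
where $\tau_\eta:(x,\xi)\mapsto(x,\xi+\eta(x))$ is fiberwise translation by a one-form on the base and $L_f:(x,\xi)\mapsto(f(x),{}^t(Df^{-1}_{f(x)})\xi)$ is the cotangent lift of $f$. A direct computation on the Liouville form $\theta = \sum\xi_j\, dx_j$ gives $\tau_\eta^*\theta = \theta + \pi^*\eta$, hence $\tau_\eta^*\Omega = \Omega + \pi^* d\eta$. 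Moreover $L_f^*\theta = \theta$ (so $L_f^*\Omega = \Omega$) and $L_f^*\pi^*\omega = \pi^*f^*\omega = \pi^*\omega$ since $f$ is symplectic. Combining these with $d\eta_\alpha = d\eta_\beta = \omega$, the two translation defects cancel:
\[
F_{\alpha,\beta}^*\Omega \;=\; \tau_{\eta_\alpha}^* L_f^*\bigl(\Omega - \pi^*\omega\bigr) \;=\; \tau_{\eta_\alpha}^*(\Omega - \pi^*\omega) \;=\; \Omega.
\]

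The only genuinely subtle point is the sign-bookkeeping in the momentum shift coming from $d(\mathcal{A}_{\beta,\alpha}\circ f^{-1})$; once that is done correctly, the remainder is a mechanical verification. The statement that $\hat F_N$ is an FIO is, at this level, essentially the observation that a pullback-times-phase operator with non-degenerate phase is the prototypical example of an FIO, so no additional semiclassical machinery is needed here.
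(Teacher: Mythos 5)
Your proof is essentially the paper's: you read $\hat F_N$ in local trivialization as pullback-by-$f^{-1}$ followed by multiplication by an $\hbar$-dependent phase, propagate a WKB test function, and read off the canonical map from the transformed phase. The paper's own proof is a two-line citation to Martinez with the phase-propagation computation pushed into the subsequent remark, so your version is a fleshed-out account of the same route rather than a new one; the decomposition $F_{\alpha,\beta}=\tau_{-\eta_\beta}\circ L_f\circ\tau_{\eta_\alpha}$ you use to verify symplecticity is a nice, clean way to see a fact the paper merely asserts.

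One sign to fix: with the paper's convention $\Omega=\sum_j dx_j\wedge d\xi_j$ one has $d\theta=-\Omega$ for $\theta=\sum_j\xi_j\,dx_j$, so the fiberwise translation satisfies $\tau_\eta^*\Omega=\Omega-\pi^*d\eta$, not $\Omega+\pi^*d\eta$ as you wrote. The error cancels because you apply the same (mis)signed identity once with $\eta_\alpha$ and once with $-\eta_\beta$, so your final line $F_{\alpha,\beta}^*\Omega=\Omega$ is still correct, but the intermediate claim should be corrected to match the convention in force.
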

}}}\end{center}
\begin{proof}
The transfer operator is given in local chart by (\ref{eq:expression_op_locale}).
This expression shows that it is the composition of two operators
$\hat{F}_{2}\circ\hat{F}_{1}$ where $\hat{F}_{1}$ is the pull-back
by the diffeomorphism $f^{-1}$ and $\hat{F}_{2}$ is the multiplication
by a phase function. Both operators are basic examples of F.I.O as
explained in~\cite[chap.5]{martinez-01} and the Proposition \ref{prop:The-transfer-operator_is_FIO}
follows.\end{proof}
\begin{rem}
We give here a more detailed explanation of the associated canonical
map (\ref{eq:expresion_canonical_map_Fab}). From (\ref{eq:expression_op_locale}),
the transfer operator $\hat{F}_{N}$ can be decomposed as elementary
operators. Let 
\[
\hat{F}_{1}:\quad u\left(x\right)\rightarrow u\left(f^{-1}\left(x\right)\right).
\]
This is a Fourier integral operator with canonical map
\begin{equation}
F_{1}\left(\begin{array}{c}
x\\
\xi
\end{array}\right)=\left(\begin{array}{c}
x'\\
\xi'
\end{array}\right)=\left(\begin{array}{c}
f\left(x\right)\\
^{t}\left(Df_{x'}^{-1}\right)\xi
\end{array}\right)\label{eq:canonical_map_F1}
\end{equation}
Indeed, it is clear that $\mbox{supp}\left(\psi\right)$ is transported
to $f\left(\mbox{supp}\left(\psi\right)\right)$ hence $x'=f\left(x\right)$.
Also an oscillating function $u\left(x\right)=e^{\frac{i}{\hbar}\xi.x}$
is transformed to $u'\left(y\right)=\left(\hat{F}_{1}u\right)\left(y\right)=e^{\frac{i}{\hbar}\xi.f^{-1}\left(y\right)}$
and, for $y=f\left(x\right)+y'$ with $\left|y'\right|\ll1$, we have
$f^{-1}\left(y\right)=x+Df_{y}^{-1}.y'+o\left(\left|y'\right|\right)$
, hence 
\[
u'\left(y\right)\simeq e^{\frac{i}{\hbar}\xi.\left(x+Df_{x'}^{-1}.y'\right)}=Ce^{\frac{i}{\hbar}\,^{t}\left(Df_{x'}^{-1}\right)\xi.y}=Ce^{\frac{i}{\hbar}\xi'.y}
\]
with $\xi'={}^{t}\left(Df_{x'}^{-1}\right)\xi$ and $C=e^{\frac{i}{\hbar}\xi\cdot(x-Df_{x'}^{-1}\cdot f(x))}$.
We deduced (\ref{eq:canonical_map_F1}). Next we consider a multiplication
operator by a ``fast oscillating phase'' (recall that $\hbar\ll1$):
\[
\hat{F}_{2}:\quad\psi\left(x\right)\rightarrow e^{iS\left(x\right)/\hbar}\psi\left(x\right)
\]
For the same reasons, it is a F.I.O. and its canonical map is
\[
F_{2}\left(\begin{array}{c}
x\\
\xi
\end{array}\right)=\left(\begin{array}{c}
x\\
\xi+dS(x)
\end{array}\right)
\]
Indeed $u\left(x\right)=e^{\frac{i}{\hbar}\xi.x}$ is transformed
to $u'\left(y\right)=\left(\hat{F}_{2}u\right)\left(y\right)=e^{\frac{i}{\hbar}\left(\xi.y+S\left(y\right)\right)}$
and for $y=x+y'$ with $\left|y'\right|\ll1$, we have
\[
u'\left(x'\right)\simeq Ce^{\frac{i}{\hbar}\left(\xi.y+dS\cdot y\right)}=Ce^{\frac{i}{\hbar}\xi'.y}
\]
with $\xi'=\xi+dS$ and $C=e^{\frac{i}{\hbar}\left(S\left(x\right)-dS_{x}.x\right)}$.
From these two previous examples and (\ref{eq:expression_op_locale}),
we can deduce (\ref{eq:expresion_canonical_map_Fab}). Notice that
the multiplication operator by $e^{V}$ does not appear in the canonical
map because it is not a ``fast oscillating function'' (in the limit
$\hbar\rightarrow0$).
\end{rem}
Notice that the canonical maps $F_{\alpha,\beta}$ in the last proposition
are given with respect to local trivializations of $P$. The following
proposition gives a global and geometric description of the canonical
map (\ref{eq:expresion_canonical_map_Fab}).

\begin{center}{\color{blue}\fbox{\color{black}\parbox{16cm}{
\begin{prop}
\label{prop:Canonical_map F}Consider the following change of variable
on $T^{*}U_{\alpha}$ for every $\alpha\in I$:
\begin{equation}
\left(x,\xi\right)\in T^{*}U_{\alpha}\rightarrow\left(x,\zeta\right)=\left(x,\xi+\eta_{\alpha}\left(x\right)\right)\in T^{*}M\label{eq:change_variable_zeta}
\end{equation}
Then the \textbf{canonical map} (\ref{eq:expresion_canonical_map_Fab})
get the simpler and global expression (independent on the set $U_{\alpha}$)
on the phase space $T^{*}M$
\begin{equation}
F:\begin{cases}
T^{*}M & \rightarrow T^{*}M\\
\left(x,\zeta\right) & \rightarrow\left(x',\zeta'\right)=\left(f(x),{}^{t}\left(Df_{x'}^{-1}\right)\zeta\right)
\end{cases}.\label{eq:new_expression_of_Fab}
\end{equation}
The symplectic form $\Omega$in (\ref{eq:big_Omega}) preserved by
$F$ is expressed:
\begin{equation}
\Omega=\sum_{j=1}^{2d}\left(dx_{j}\wedge d\zeta_{j}\right)+\tilde{\pi}^{*}\left(\omega\right).\label{eq:new_expression_Omega}
\end{equation}
with the canonical projection map $\tilde{\pi}:T^{*}M\rightarrow M$.
\end{prop}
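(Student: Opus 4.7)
The plan is to establish this proposition by direct computation in local coordinates, then verify that the resulting expressions are independent of the choice of trivialization.

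First I would substitute the change of variable $\xi = \zeta - \eta_\alpha(x)$ on $T^{*}U_\alpha$ and $\xi' = \zeta' - \eta_\beta(x')$ on $T^{*}U_\beta$ into formula (\ref{eq:expresion_canonical_map_Fab}). Writing out $\xi' + \eta_\beta(x') = {}^{t}(Df_{x'}^{-1})(\xi + \eta_\alpha(x))$, the $\eta_\alpha(x)$ term is absorbed on the left into $\zeta$ by definition, while the $-\eta_\beta(x')$ on the right is cancelled when one forms $\zeta' = \xi' + \eta_\beta(x')$. This immediately gives $\zeta' = {}^{t}(Df_{x'}^{-1})\,\zeta$, which is exactly (\ref{eq:new_expression_of_Fab}).

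Next I would compute the symplectic form. Since $\eta_\alpha$ is a $1$-form on $U_\alpha$ that depends only on the base point, $d\xi_j = d\zeta_j - d\eta_{\alpha,j}$, hence
\[
\Omega=\sum_{j}dx_{j}\wedge d\xi_{j}=\sum_{j}dx_{j}\wedge d\zeta_{j}-\sum_{j}dx_{j}\wedge d\eta_{\alpha,j}.
\]
Writing $\eta_\alpha=\sum_{j}\eta_{\alpha,j}(x)\,dx_{j}$ and using (\ref{eq:omega_eta}), one has $\omega=d\eta_\alpha=\sum_{j}d\eta_{\alpha,j}\wedge dx_{j}$, so by antisymmetry of the wedge product $\sum_{j}dx_{j}\wedge d\eta_{\alpha,j}=-\omega$. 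Pulling back via $\tilde{\pi}:T^{*}M\to M$ yields $\Omega=\sum_{j}dx_{j}\wedge d\zeta_{j}+\tilde{\pi}^{*}\omega$.

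Finally I would verify that both formulas are independent of the chosen trivialization. Under the gauge transformation $\tau_{\beta}=e^{i\chi}\tau_{\alpha}$, equation (\ref{eq:gauge_transform}) gives $\eta_{\alpha'}=\eta_{\alpha}-\tfrac{1}{2\pi}d\chi$, so $\zeta_{\alpha'}-\zeta_{\alpha}=-\tfrac{1}{2\pi}d\chi(x)$. For the symplectic form, the extra contribution $\sum_{j}dx_{j}\wedge d(\partial_{j}\chi)=\sum_{j,k}\partial_{k}\partial_{j}\chi\,dx_{j}\wedge dx_{k}$ vanishes by symmetry of mixed partials, so $\sum_{j}dx_{j}\wedge d\zeta_{j}$ is globally defined on $T^{*}M$. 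For the map, since $\zeta$ transforms by addition of a $1$-form on $M$ pulled back to $T^{*}M$, and since ${}^{t}(Df^{-1})$ acts linearly on the fibers while intertwining this pullback (by the chain rule applied to $\chi\circ f^{-1}$), the formula $(x,\zeta)\mapsto(f(x),{}^{t}(Df_{x'}^{-1})\zeta)$ is consistent across charts and defines a single map $F:T^{*}M\to T^{*}M$. The main obstacle is mostly clerical: getting the sign right in the identity $\sum_{j}dx_{j}\wedge d\eta_{\alpha,j}=-d\eta_\alpha$ and carefully checking the two gauge-invariance statements, both of which ultimately rely on the commutativity of partial derivatives.
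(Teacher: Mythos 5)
Your derivation of (\ref{eq:new_expression_of_Fab}) and (\ref{eq:new_expression_Omega}) is correct and essentially identical to the paper's proof: the first claim the paper dismisses as obvious, and for the second the paper does exactly your wedge calculation (expanding $d\xi_j = d\zeta_j - d\eta_j$, using $\omega = d\eta_\alpha$, and anti-commuting one factor).

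The added gauge-invariance verification, however, contains a genuine error that reveals a misunderstanding of what $\xi$ is. You assert $\zeta_{\alpha'}-\zeta_\alpha=-\tfrac{1}{2\pi}d\chi(x)$ by holding $\xi$ fixed while replacing $\eta_\alpha$ by $\eta_{\alpha'}$. But $\xi$ in (\ref{eq:expresion_canonical_map_Fab}) is not an abstract covector independent of the trivialization; it is the phase-space variable attached to the locally trivialized function $u_\alpha=u\circ\tau_\alpha$. Under $\tau_{\alpha'}=e^{i\chi}\tau_\alpha$ the local representative becomes $u_{\alpha'}=e^{iN\chi}u_\alpha$, and multiplication by $e^{iN\chi}=e^{i(\chi/2\pi)/\hbar}$ is precisely the Fourier integral operator whose canonical map shifts $\xi$ by $+\tfrac{1}{2\pi}d\chi$ (this is the second elementary FIO discussed after Proposition \ref{prop:The-transfer-operator_is_FIO}). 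This shift cancels the shift $-\tfrac{1}{2\pi}d\chi$ in $\eta_\alpha$ coming from (\ref{eq:gauge_transform}), so in fact $\zeta_{\alpha'}=\zeta_\alpha$: the variable $\zeta$ is gauge-invariant. This is also what Remark \ref{rem:2.22} says in geometric language, since $\zeta\,dx$ is the symbol of the intrinsic covariant derivative $-i\hbar D$. Your subsequent ``intertwining'' argument for the map is also unfounded: the gauge functions on $U_\alpha$ near $x$ and on $U_\beta$ near $x'=f(x)$ are independent data, and there is no reason for them to be related by $\chi\circ f^{-1}$. None of this damages your main computation, but the correct reason that (\ref{eq:new_expression_of_Fab}) is global is simply that $\zeta$ itself is chart-independent, so the formula is literally identical in every chart; the invariance of $\sum_j dx_j\wedge d\zeta_j$ then needs no cancellation of mixed partials.
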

}}}\end{center}
\begin{rem}
We will see in Remark \ref{rem:2.22} that the variables $\zeta$
can be interpreted as the symbol of the covariant derivative operator
$D$. The change of variables (\ref{eq:change_variable_zeta}) and
the global geometric description (\ref{eq:new_expression_of_Fab})
is standard in mathematical physics for semiclassical problems involving
large magnetic fields on manifolds. \end{rem}
\begin{proof}
The relation (\ref{eq:new_expression_of_Fab}) is obvious from (\ref{eq:expresion_canonical_map_Fab})
and (\ref{eq:change_variable_zeta}). To prove (\ref{eq:new_expression_Omega}),
we write in coordinates
\[
\eta_{\alpha}=\sum_{j=1}^{2d}\eta_{j}dx_{i}
\]
and from (\ref{eq:omega_eta}),
\[
\omega=d\eta_{\alpha}=\sum_{i,j}\left(\frac{\partial\eta_{j}}{\partial x_{i}}\right)\left(dx_{i}\wedge dx_{j}\right).
\]
Then from (\ref{eq:big_Omega})
\begin{eqnarray*}
\Omega & = & \sum_{j=1}^{2d}dx_{j}\wedge d\xi_{j}=\sum_{j=1}^{2d}dx_{j}\wedge\left(d\zeta_{j}-d\eta_{j}\right)=\sum_{j=1}^{2d}dx_{j}\wedge\left(d\zeta_{j}-\sum_{i=1}^{2d}\left(\frac{\partial\eta_{j}}{\partial x_{i}}\right)dx_{i}\right)\\
 & = & \sum_{j=1}^{2d}\left(dx_{j}\wedge d\zeta_{j}\right)+\omega
\end{eqnarray*}

\end{proof}

\subsubsection{The trapped set $K$}

In relation to the global expression (\ref{eq:new_expression_of_Fab})
of the canonical map associated to the prequantum transfer operator
$\hat{F}_{N}$, it should be natural to introduce the following definition.

\begin{center}{\color{red}\fbox{\color{black}\parbox{16cm}{
\begin{defn}
The\textbf{ }\emph{trapped set }$K\Subset T^{*}M$ is the set of points
$\left(x,\xi\right)\in T^{*}M$ which do not escape to infinity in
the past neither in the future $n\rightarrow\pm\infty$ under the
dynamics of the canonical map $F$ :
\[
K:=\left\{ \left(x,\xi\right),\exists C>0,\left|\xi\left(n\right)\right|\leq C,\forall n\in\mathbb{Z},\mbox{ with }\left(x\left(n\right),\xi\left(n\right)\right):=F^{n}\left(x,\xi\right)\right\} .
\]

\end{defn}
}}}\end{center}
\begin{rem}
In terms of the theory of dynamical systems, the trapped set $K$
is the \emph{non-wandering set} for the dynamical system generated
by $F$.
\end{rem}
At every point $\rho\in K$ of the trapped set we have the decomposition 

\[
T_{\rho}\left(T^{*}M\right)=E_{u}^{*}\left(\rho\right)\oplus E_{s}^{*}\left(\rho\right)
\]
 into the unstable and stable subspaces with respect to the action
of $DF$ defined by: 
\[
E_{u}^{*}(\rho):=\{v\in T_{\rho}(T^{*}M)\mid|DF_{\rho}^{-n}(v)|\to0\mbox{ as }n\to+\infty\}
\]
and 
\[
E_{s}^{*}(\rho):=\{v\in T_{\rho}(T^{*}M)\mid|DF_{\rho}^{n}(v)|\to0\mbox{ as }n\to+\infty\}.
\]

This decomposition is dual to the decomposition (\ref{eq:foliation})
in the sense that in the fiber $T_{\rho}^{*}M$ we have $\left(E_{u}^{*}\cap T_{\rho}^{*}M\right)\left(E_{u}\right)=0$,
$\left(E_{s}^{*}\cap T_{\rho}^{*}M\right)\left(E_{s}\right)=0$.

\begin{center}{\color{blue}\fbox{\color{black}\parbox{16cm}{
\begin{prop}
\textbf{``Description of the trapped set $K$''.} The trapped set
$K\subset T^{*}M$ is the zero section:
\begin{equation}
K=\left\{ \left(x,\zeta\right)\in T^{*}M\mid x\in M,\,\zeta=0\right\} .\label{eq:K=00003Dzeta=00003D0}
\end{equation}
This is a symplectic submanifold of $\left(T^{*}M,\Omega\right)$
isomorphic to $\left(M,\omega\right)$. For every point $\rho\in K$,
the tangent space is decomposed as an $\Omega$-orthogonal sum of
\emph{symplectic} linear subspaces 
\begin{equation}
T_{\rho}\left(T^{*}M\right)=T_{\rho}K\overset{\perp}{\oplus}\left(T_{\rho}K\right)^{\perp}\label{eq:TK_TKorth}
\end{equation}
Moreover each part is decomposed into \emph{isotropic} unstable/stable
linear spaces
\begin{equation}
T_{\rho}K=E_{u}^{\left(1\right)}\left(\rho\right)\oplus E_{s}^{\left(1\right)}\left(\rho\right),\qquad\left(T_{\rho}K\right)^{\perp}=E_{u}^{\left(2\right)}\left(\rho\right)\oplus E_{s}^{\left(2\right)}\left(\rho\right)\label{eq:def_E_u_s_1}
\end{equation}
where \textup{the subspaces $E_{\sigma}^{\left(i\right)}\left(\rho\right)$
for $i=1,2$ and $\sigma=s,u$ are $d$-dimensional subspaces defined
by} 
\[
E_{\sigma}^{\left(1\right)}\left(\rho\right)=T_{\rho}K\cap E_{\sigma}^{*}(\rho),\quad E_{\sigma}^{\left(2\right)}\left(\rho\right)=(T_{\rho}K)^{\perp}\cap E_{\sigma}^{*}(\rho)\quad\mbox{for \ensuremath{\sigma=s,u}.}
\]
All the decompositions above are invariant by the map $F$. See Figure
\ref{fig:Symplectic-orthogonal-decomposit-1}.
\end{prop}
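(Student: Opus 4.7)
The plan is to read off each statement from the simple block form of the canonical map $F(x,\zeta) = (f(x),{}^t(Df_{f(x)}^{-1})\zeta)$ given in (\ref{eq:new_expression_of_Fab}), combined with the Anosov hyperbolicity of $f$ and the explicit expression (\ref{eq:new_expression_Omega}) of $\Omega$.

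First I would identify $K$. Iterating $F$, the fibre component evolves as $\zeta(n) = {}^t(Df_x^n)^{-1}\zeta$. Using the Anosov splitting $T_xM = E_u(x)\oplus E_s(x)$, I consider the dual decomposition $T_x^*M = E_u^{\mathrm{ann}}(x)\oplus E_s^{\mathrm{ann}}(x)$, where $E_\sigma^{\mathrm{ann}}(x)$ denotes the annihilator of $E_\sigma(x)$. The estimates in (\ref{eq:def_dynamics}) translate into: the covector norm on $E_u^{\mathrm{ann}}$ is expanded by a factor $\geq\lambda$ at each step of $F$, while it is contracted by $\leq\lambda^{-1}$ on $E_s^{\mathrm{ann}}$. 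Boundedness of $\zeta(n)$ as $n\to+\infty$ therefore forces the $E_u^{\mathrm{ann}}$-component of $\zeta$ to vanish, boundedness as $n\to-\infty$ forces the $E_s^{\mathrm{ann}}$-component to vanish, so $\zeta=0$, which gives (\ref{eq:K=00003Dzeta=00003D0}). The symplectic statement $K\cong(M,\omega)$ is then immediate: substituting $d\zeta=0$ into (\ref{eq:new_expression_Omega}) leaves $\Omega|_K = \tilde{\pi}^*\omega|_K = \omega$.

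For the decomposition (\ref{eq:TK_TKorth}), I identify $T_\rho(T^*M)\cong T_xM\oplus T_x^*M$ at $\rho=(x,0)$ and compute with (\ref{eq:new_expression_Omega})
\[
\Omega_\rho\bigl((u,w),(u',w')\bigr) \;=\; w'(u) - w(u') + \omega(u,u').
\]
Imposing $\Omega_\rho((u,0),(u',w'))=0$ for all $u\in T_xM$ gives $w' = -\iota_{u'}\omega$, hence
\[
(T_\rho K)^\perp = \bigl\{\,(u',-\iota_{u'}\omega) : u'\in T_xM\,\bigr\}.
\]
Since $\omega$ is nondegenerate on $T_\rho K\cong T_xM$, the restriction $\Omega|_{T_\rho K}$ is symplectic, so $T_\rho K$ and $(T_\rho K)^\perp$ are complementary symplectic subspaces.

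Finally, linearising $F$ at $\rho=(x,0)$, the $\partial_x$-part of the fibre component involves a factor $\zeta$ which vanishes, giving the block form $DF_\rho(u,w)=(Df_x u,\,{}^t(Df_{f(x)}^{-1})w)$. On $T_\rho K$ (where $w=0$) the action is thus $Df_x$. On $(T_\rho K)^\perp$, parametrised by $u'\in T_xM$, the identity $f^*\omega=\omega$ yields $DF_\rho(u',-\iota_{u'}\omega) = (Df_x u',-\iota_{Df_x u'}\omega)$, so the induced action is again $Df_x$. Both invariant subspaces then inherit the stable/unstable splitting of $(T_xM,Df_x)$, defining $E_\sigma^{(1)}$ and $E_\sigma^{(2)}$; these coincide with the intersections $T_\rho K\cap E_\sigma^*(\rho)$ and $(T_\rho K)^\perp\cap E_\sigma^*(\rho)$ because $Df_x$ is precisely the contracting/expanding dynamics inside each piece. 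Isotropy of each $E_\sigma^{(i)}$ follows from the Lagrangian property of $E_u(x),E_s(x)$ in $(T_xM,\omega)$ recalled in Remark \ref{Remark:Holder_exp_beta}(1), together with the observation that $\Omega$ restricts to $\pm\omega$ on $T_\rho K$ and $(T_\rho K)^\perp$ under the natural parametrisations. The only subtlety, which is bookkeeping rather than a real obstacle, is to keep the two different ``duals'' separate: the annihilator splitting of $T_x^*M$ used in Step~1 is \emph{not} the symplectic orthogonal $(T_\rho K)^\perp$ used afterwards, which is a graph inside $T_\rho(T^*M)$ cut out by $\omega$.
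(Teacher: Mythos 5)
Your argument is correct and reaches the same conclusions the paper asserts, but it is considerably more explicit than the paper's own proof, which is quite terse: the paper merely remarks that $\zeta\neq 0$ implies escape, that $\Omega|_{K}=\tilde{\pi}^{*}\omega$, and that a symplectic subspace automatically has a complementary symplectic orthogonal; the detailed block structure of $DF_{\rho}$ and the identification of the four isotropic pieces $E_{\sigma}^{(i)}$ are essentially deferred to the following Proposition~\ref{prop:DF_ro} (where normal Darboux coordinates are chosen so that $E_{s}(x)$, $E_{u}(x)$ become $p=0$, $q=0$). What your route buys is a coordinate-free description: you exhibit $(T_{\rho}K)^{\perp}$ explicitly as the graph $\{(u',-\iota_{u'}\omega)\}\subset T_{x}M\oplus T_{x}^{*}M$, verify directly that $DF_{\rho}$ acts by $Df_{x}$ on \emph{both} factors (using $f^{*}\omega=\omega$ for the graph), and read off that $\Omega$ restricts to $\omega$ and $-\omega$ respectively, so isotropy of the $E_{\sigma}^{(i)}$ drops out of the Lagrangian property of $E_{u},E_{s}$. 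This is a cleaner self-contained package for the statement at hand, at the cost of some sign bookkeeping. Two very small remarks: the sign in $w'=-\iota_{u'}\omega$ depends on whether one takes $\iota_{u'}\omega=\omega(u',\cdot)$ or $\omega(\cdot,u')$ — either is fine, but be consistent; and your iterated formula $\zeta(n)={}^{t}(Df_{x}^{n})^{-1}\zeta$ has the base point slightly off (it should be $D(f^{n})$ at $f(x)$, not at $x$, if you chain $F$ as written), though this has no effect on the boundedness argument.
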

}}}\end{center}

\begin{center}
\begin{figure}[h]
\begin{centering}
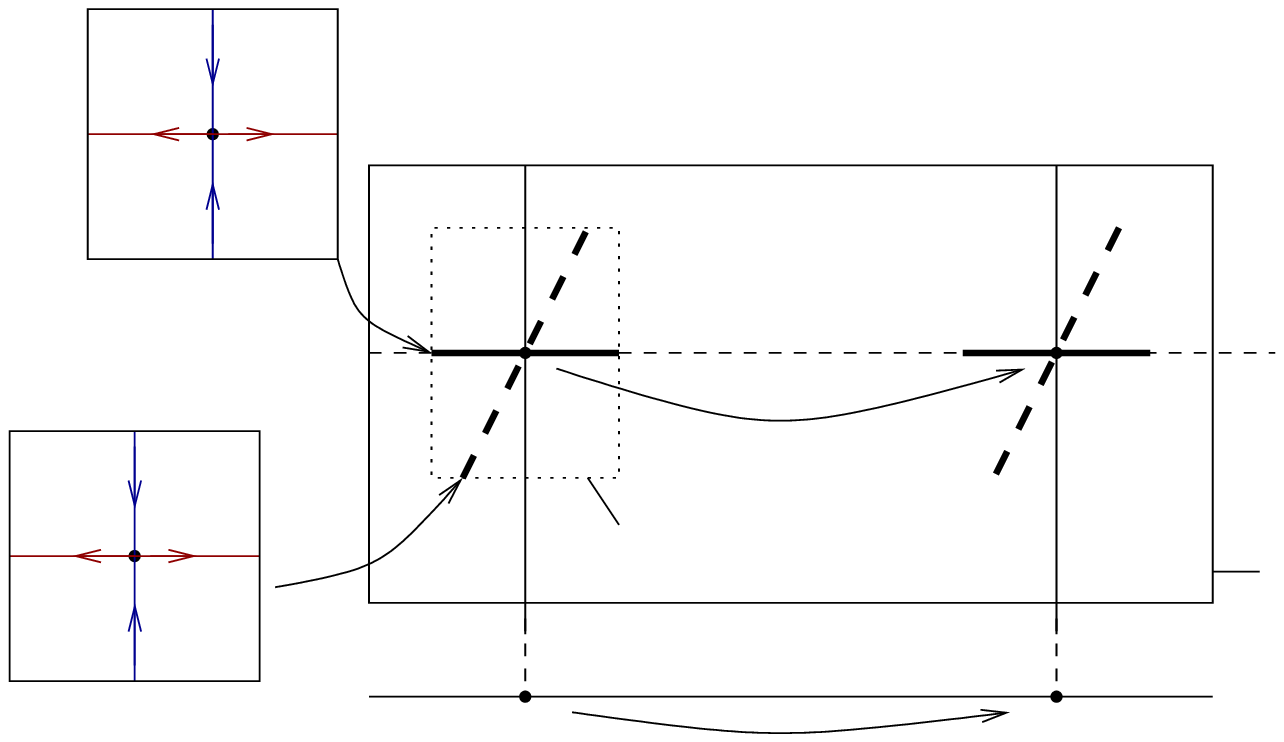
\par\end{centering}

\caption{\label{fig:Symplectic-orthogonal-decomposit-1} The decompositions
of the tangent space $T_{\rho}(T^{*}M)$.}
\end{figure}

\par\end{center}
\begin{rem}
There is another $F$-invariant decomposition:
\[
\underbrace{T_{\rho}\left(T^{*}M\right)}_{4d}=\underbrace{T_{\rho}K}_{2d}\oplus\underbrace{T_{x}^{*}M}_{2d}
\]
with $x=\tilde{\pi}\left(\rho\right)$ and $T_{x}^{*}M$ the fiber
of the cotangent space. However this sum is not $\Omega$-orthogonal
and moreover $T_{x}^{*}M$ is $\Omega$-Lagrangian.
\end{rem}
~
\begin{rem}
It is shown in \cite[appendix]{berman_sjostrand_2005_asymptotics}
that the cotangent space $\left(T^{*}M,\Omega\right)$ is an ``affine
cotangent space'' and can be geometrically defined as the space of
connections on the principal bundle $P\rightarrow M$. The trapped
set $K$ is the section defined by connection itself \cite[appendix]{berman_sjostrand_2005_asymptotics}.\end{rem}
\begin{proof}
From (\ref{eq:new_expression_of_Fab}) it is clear that the trapped
set is the zero section $\left\{ \left(x,\zeta\right),\zeta=0\right\} $
since it is invariant and, if $\zeta\neq0$, we have $\left|F^{n}\left(x,\zeta\right)\right|\rightarrow\infty$
at least either as $n\rightarrow\infty$ or $n\to-\infty$. From (\ref{eq:new_expression_Omega})
$\Omega_{/K}=\tilde{\pi}^{*}\omega$ therefore $\tilde{\pi}:\left(K,\Omega\right)\rightarrow\left(M,\omega\right)$
is a symplectomorphism. The symplectic maps $f:M\rightarrow M$ and
$F:K\rightarrow K$ are conjugated by $\tilde{\pi}$. For every point
$\rho\in K$, $T_{\rho}K$ is a linear symplectic subspace of the
symplectic linear space $T_{\rho}\left(T^{*}M\right)$ and therefore
admits a unique symplectic orthogonal $\left(T_{\rho}K\right)^{\bot}$.
The decomposition (\ref{eq:TK_TKorth}) is invariant under the map
$F$ because the trapped set $K$ is invariant and because $F$ preserves
the symplectic form $\Omega$. 
\end{proof}
In the next proposition, we introduce convenient local coordinates,
called normal coordinates or Darboux coordinates. We will use them
later in the proof.

\begin{center}{\color{blue}\fbox{\color{black}\parbox{16cm}{
\begin{prop}
\textbf{\label{prop:Normal-coordinates.}``Normal coordinates''}.
On a sufficiently small neighborhood $U$ of every point $x\in M$,
there exist coordinates

\[
x=\left(q,p\right)=\left(q^{1},\ldots q^{d},p^{1},\ldots p^{d}\right)
\]
and a trivialization of $P$ such that the connection one-form in
(\ref{eq:def_eta-1}) is given by
\begin{equation}
\eta=\sum_{i=1}^{d}\left(\frac{1}{2}q^{i}dp^{i}-\frac{1}{2}p^{i}dq^{i}\right)\label{eq:eta_euclidian_model-1}
\end{equation}
and consequently the symplectic form $\omega$ is given by
\begin{equation}
\omega=d\eta=\sum_{i=1}^{d}dq^{i}\wedge dp^{i}.\label{eq:omega_R2n}
\end{equation}
On the cotangent bundle $T^{*}U$, there is a change of coordinates
$\Phi:\left(x,\xi\right)\rightarrow\left(\nu,\zeta\right)$ where
the variables $\zeta=\left(\zeta_{p}^{j},\zeta_{q}^{j}\right)_{j=1,\ldots,d}$
are already defined in (\ref{eq:change_variable_zeta}) as 
\begin{eqnarray}
\zeta_{p}^{i}: & = & \xi_{p}^{i}+\eta_{p}^{i}=\xi_{p}^{i}+\frac{1}{2}q^{i}\label{eq:def_zeta_j-1-1}\\
\zeta_{q}^{i}: & = & \xi_{q}^{i}+\eta_{q}^{i}=\xi_{q}^{i}-\frac{1}{2}p^{i}\nonumber 
\end{eqnarray}
while $\nu=\left(\nu_{q}^{j},\nu_{p}^{j}\right)_{j=1,\ldots,d}$ are
given by
\begin{eqnarray}
\nu_{q}^{j} & = & q^{j}-\zeta_{p}^{j}=\frac{1}{2}q^{j}-\xi_{p}^{j}\label{eq:def_nu-1}\\
\nu_{p}^{j} & = & p^{j}+\zeta_{q}^{j}=\frac{1}{2}p^{j}+\xi_{q}^{j}\nonumber 
\end{eqnarray}
This change of variables transforms the symplectic form $\Omega$
in (\ref{eq:new_expression_Omega}) to the normal form:
\begin{equation}
\Omega=\sum_{j=1}^{d}\left(d\nu_{q}^{j}\wedge d\nu_{p}^{j}\right)+\left(d\zeta_{p}^{j}\wedge d\zeta_{q}^{j}\right).\label{eq:omega_AM-1}
\end{equation}

\end{prop}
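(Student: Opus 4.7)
The proof splits naturally into two essentially independent parts: (a) constructing the coordinate system $(q,p)$ and trivialization of $P$ so that $\eta$ takes the symmetric ``Poincaré gauge'' form \eqref{eq:eta_euclidian_model-1}, and (b) verifying by a direct computation that the given change of variables on $T^*U$ brings $\Omega$ to the normal form \eqref{eq:omega_AM-1}.

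For (a), I would first invoke Darboux's theorem to choose local coordinates $(q,p)$ on a sufficiently small simply connected neighborhood $U$ of $x$ such that $\omega=\sum_i dq^i\wedge dp^i$ on $U$. Then I pick any local smooth section $\tau_\alpha:U\to P$; the pulled-back connection is of the form $T_\alpha^* A=id\theta-i2\pi\eta_\alpha$ with $d\eta_\alpha=\omega$ by \eqref{eq:omega_eta}. The target reference one-form $\eta_0:=\frac{1}{2}\sum_i(q^i dp^i-p^i dq^i)$ also satisfies $d\eta_0=\omega$, so $\eta_\alpha-\eta_0$ is closed on the simply connected set $U$, hence exact: there exists a smooth function $f:U\to\mathbb{R}$ with $df=\eta_\alpha-\eta_0$. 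Now apply the gauge transformation rule \eqref{eq:gauge_transform} by switching to the new section $\tau_\beta:=e^{i2\pi f}\tau_\alpha$; the resulting connection one-form is $\eta_\beta=\eta_\alpha-\frac{1}{2\pi}d(2\pi f)=\eta_0$, which is the desired expression \eqref{eq:eta_euclidian_model-1}.

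For (b), I would carry out the direct computation. Starting from $\Omega=\sum_j dx_j\wedge d\zeta_j+\tilde\pi^*\omega$ established in Proposition \ref{prop:Canonical_map F}, and writing $x=(q,p)$ and $\zeta=(\zeta_q,\zeta_p)$ with $\omega=\sum_i dq^i\wedge dp^i$ from step (a), I obtain
\[
\Omega=\sum_i\bigl(dq^i\wedge d\zeta_q^i+dp^i\wedge d\zeta_p^i+dq^i\wedge dp^i\bigr).
\]
From \eqref{eq:def_nu-1}, inverting gives $q^i=\nu_q^i+\zeta_p^i$ and $p^i=\nu_p^i-\zeta_q^i$, so $dq^i=d\nu_q^i+d\zeta_p^i$ and $dp^i=d\nu_p^i-d\zeta_q^i$. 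Substituting and expanding, the mixed $d\nu\wedge d\zeta$ terms cancel pairwise and one is left with $d\nu_q^i\wedge d\nu_p^i+d\zeta_p^i\wedge d\zeta_q^i$, which is precisely \eqref{eq:omega_AM-1}.

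Neither step involves any serious obstacle: part (a) is Darboux plus Poincaré's lemma plus the gauge freedom already recorded in \eqref{eq:gauge_transform}, while part (b) is an algebraic wedge-product check. The only mild care needed is in bookkeeping the index conventions, since the cotangent coordinates $\zeta_p^i$ are the momenta conjugate to $p^i$ but appear with a sign opposite to the naive guess in the formulas for $\nu_q^i,\nu_p^i$; getting those signs right is what makes the $\pi^*\omega$ piece absorb cleanly into the two symplectic blocks on $T_\rho K$ and $(T_\rho K)^\perp$.
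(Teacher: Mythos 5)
Your proof is correct and follows essentially the same route as the paper: part (a) is Darboux's theorem plus a gauge change via \eqref{eq:gauge_transform} to absorb the exact discrepancy between the pulled-back one-form and the target $\eta_0$, and part (b) is the same direct wedge-product computation starting from \eqref{eq:new_expression_Omega}. The only cosmetic difference is that the paper names the gauge function $\chi$ and writes $\eta'-\eta=\frac{1}{2\pi}d\chi$ directly, whereas you first produce $f$ with $df=\eta_\alpha-\eta_0$ and then set $\chi=2\pi f$; these are the same step.
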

}}}\end{center}
\begin{rem}
Recall from (\ref{eq:K=00003Dzeta=00003D0}) that $T_{\rho}K=\left\{ \left(\nu,\zeta\right),\zeta=0\right\} $.
Then (\ref{eq:omega_AM-1}) implies that its symplectic orthogonal
is given by $\left(T_{\rho}K\right)^{\bot}=\left\{ \left(\nu,\zeta\right),\nu=0\right\} $.
In other words, $\left(\nu,\zeta\right)$ are symplectic coordinates
related to the decomposition (\ref{eq:TK_TKorth}). These coordinates
were introduced in the paper \cite{fred-PreQ-06} treating the linear
case, under the different names $\left(Q_{1},P_{1}\right)\equiv\left(\nu_{q},\nu_{p}\right)$
and $\left(Q_{2},P_{2}\right)\equiv\left(\zeta_{p},\zeta_{q}\right)$.\end{rem}
\begin{proof}
Darboux theorem on symplectic structure (see \cite{Aebischer}) tells
that, if we take sufficiently small neighborhood $U$ of $x$, there
exist coordinates $x=\left(q,p\right)=\left(q^{1},\ldots q^{d},p^{1},\ldots p^{d}\right)$
such that the symplectic form is expressed in the normal form $\omega=dq\wedge dp=\sum_{i=1}^{d}dq^{i}\wedge dp^{i}.$
Take any local smooth section $\tau':U\to P$ and let $\eta'$ be
the local connection one form (see (\ref{eq:def_eta-1})) with respect
to the corresponding local trivialization of $P$. Let $\eta$ be
given in (\ref{eq:eta_euclidian_model-1}). Since we have $d(\eta'-\eta)=\omega-\omega=0$
there is a smooth function $\chi:U\to\mathbb{R}$ such that $\eta'-\eta=\frac{1}{2\pi}d\chi$.
Setting $\tau=e^{i\chi}\tau'$ and recalling the formula (\ref{eq:gauge_transform}),
we see that the former statement of the proposition holds for the
coordinates $x=\left(q,p\right)=\left(q^{1},\ldots q^{d},p^{1},\ldots p^{d}\right)$
and the trivialization of $P$ associated to the local smooth section
$\tau$ thus taken. 

We prove now the latter statement. $ $ (\ref{eq:change_variable_zeta})
and (\ref{eq:eta_euclidian_model-1}) imply (\ref{eq:def_zeta_j-1-1}).
Clearly (\ref{eq:def_zeta_j-1-1}) and (\ref{eq:def_nu-1}) are coordinates
on $U$ as we can give the inverse explicitly. Starting from (\ref{eq:new_expression_Omega})
we get
\begin{eqnarray*}
\Omega & = & \sum_{j=1}^{d}\left(dq^{j}\wedge d\zeta_{q}^{j}+dp^{j}\wedge d\zeta_{p}^{j}+dq^{j}\wedge dp^{j}\right)\\
 & = & \sum_{j=1}^{d}\left(\left(d\nu_{q}^{j}+d\zeta_{p}^{j}\right)\wedge d\zeta_{q}^{j}+\left(d\nu_{p}^{j}-d\zeta_{q}^{j}\right)\wedge d\zeta_{p}^{j}+d\left(\nu_{q}^{j}+\zeta_{p}^{j}\right)\wedge\left(d\nu_{p}^{j}-d\zeta_{q}^{j}\right)\right)\\
 & = & \sum_{j=1}^{d}d\nu_{q}^{j}\wedge d\nu_{p}^{j}-d\zeta_{q}^{j}\wedge d\zeta_{p}^{j}=\sum_{j=1}^{d}d\nu_{q}^{j}\wedge d\nu_{p}^{j}+d\zeta_{p}^{j}\wedge d\zeta_{q}^{j}.
\end{eqnarray*}
This completes the proof. \end{proof}
\begin{rem}
\label{Def:sharp_flat_operator}Since the symplectic 2-form $\omega=\sum_{j}dq^{j}\wedge dp^{j}$
on $T_{x}^{*}M\equiv\mathbb{R}^{2d}$ is non degenerate, it defines
an isomorphism, called \emph{flat operator}, 
\[
\flat:\mathbb{R}^{2d}\rightarrow\left(\mathbb{R}^{2d}\right)^{*},\quad v^{\flat}=\omega\left(v,\cdot\right).
\]
Its inverse is called the \emph{sharp operator}. 
\[
\sharp=\flat^{-1}:\left(\mathbb{R}^{2d}\right)^{*}\rightarrow\mathbb{R}^{2d}.
\]
For a one-form $\alpha\in\left(\mathbb{R}^{2d}\right)^{*}$, $\alpha^{\sharp}$
is defined by the relation $\alpha=\omega\left(\alpha^{\sharp},.\right)$.
In coordinates, for 
\[
v=\sum_{j=1}^{n}v_{q}^{j}\frac{\partial}{\partial q^{j}}+v_{p}^{j}\frac{\partial}{\partial p^{j}}\equiv\left(v_{q},v_{p}\right)\in\mathbb{R}^{2d}\quad\mbox{and }\quad\alpha=\sum_{j=1}^{n}\alpha_{q}^{j}dq^{j}+\alpha_{p}^{j}dp^{j}\equiv\left(\alpha_{q},\alpha_{p}\right)\in\left(\mathbb{R}^{2d}\right)^{*}
\]
we have
\begin{equation}
v^{\flat}=\sum_{j=1}^{n}-v_{p}^{j}dq^{j}+v_{q}^{j}dp^{j}\equiv\left(-v_{p},v_{q}\right)\in\left(\mathbb{R}^{2d}\right)^{*}\label{eq:flat-1}
\end{equation}
 
\begin{equation}
\alpha^{\sharp}=\sum_{j=1}^{n}\alpha_{p}^{j}\frac{\partial}{\partial q^{j}}-\alpha_{q}^{j}\frac{\partial}{\partial p^{j}}\equiv\left(\alpha_{p},-\alpha_{q}\right)\in\mathbb{R}^{2d}\label{eq:sharp-1}
\end{equation}
We also have 
\begin{equation}
\alpha\left(v\right)=-v^{\flat}\left(\alpha^{\sharp}\right)\in\mathbb{R},\quad\flat^{t}=-\flat,\quad\sharp^{t}=-\sharp.\label{eq:sym_J-1}
\end{equation}
where $\flat^{t},\sharp^{t}$ denote the transposed maps. Using these
notation, the relation (\ref{eq:def_nu-1}) can be written in a more
intrinsic manner: 
\begin{equation}
\nu:=x-\zeta^{\#}.\label{eq:def_nu-2}
\end{equation}
With these notations (\ref{eq:omega_AM-1}) get also a more geometric
expression: 
\begin{equation}
\Omega\left(\left(\nu_{1},\zeta_{1}\right),\left(\nu_{2},\zeta_{2}\right)\right)=\omega\left(\nu_{1},\nu_{2}\right)+\omega\left(\sharp\zeta_{2},\sharp\zeta_{1}\right)\label{eq:Omega_nu_zeta-1}
\end{equation}

\end{rem}
\begin{center}{\color{blue}\fbox{\color{black}\parbox{16cm}{
\begin{prop}
\label{prop:DF_ro} The normal coordinates $\left(\nu_{q},\nu_{p},\zeta_{p},\zeta_{q}\right)$
introduced in the last proposition can be chosen so that the differential
of the coordinate map $\Phi:\tilde{\pi}^{-1}(U)\to\real^{4d}$ at
any point $\rho\in K$ such that $\tilde{\pi}\left(\rho\right)=x\in U$
carries the subspaces $E_{u}^{\left(1\right)}\left(\rho\right)$,
$E_{s}^{\left(1\right)}\left(\rho\right)$, $E_{u}^{\left(2\right)}\left(\rho\right)$,
$E_{s}^{\left(2\right)}\left(\rho\right)$ in (\ref{eq:def_E_u_s_1})
to the subspaces
\begin{align*}
\real_{\nu_{q}}^{d}:= & \left\{ \left(\nu_{q},0,0,0\right)\mid\nu_{q}\in\real^{d}\right\} ,\quad\real_{\nu_{p}}^{d}:=\left\{ \left(0,\nu_{p},0,0\right)\mid\nu_{q}\in\real^{d}\right\} ,\\
\real_{\zeta_{p}}^{d}:= & \left\{ \left(0,0,\zeta_{p},0\right)\mid\zeta_{p}\in\real^{d}\right\} ,\quad\real_{\zeta_{q}}^{d}:=\left\{ \left(0,0,0,\zeta_{q}\right)\mid\zeta_{q}\in\real^{d}\right\} 
\end{align*}
respectively. That is to say,
\begin{eqnarray*}
T_{\rho}\left(T^{*}M\right) & = & \underbrace{E_{u}^{\left(1\right)}\left(\rho\right)\oplus E_{s}^{\left(1\right)}\left(\rho\right)}_{T_{\rho}K}\overset{\perp}{\oplus}\underbrace{E_{u}^{\left(2\right)}\left(\rho\right)\oplus E_{s}^{\left(2\right)}\left(\rho\right)}_{\left(T_{\rho}K\right)^{\perp}}\\
D\Phi\downarrow &  & \qquad\quad\downarrow\qquad\qquad\qquad\downarrow\\
T^{*}\mathbb{R}_{\left(q,p\right)}^{2d} & = & \underbrace{\left(\mathbb{R}_{\nu_{q}}^{d}\oplus\mathbb{R}_{\nu_{p}}^{d}\right)}_{T^{*}\mathbb{R}_{\nu_{q}}^{d}}\quad\overset{\perp}{\oplus}\quad\underbrace{\left(\mathbb{R}_{\zeta_{p}}^{d}\oplus\mathbb{R}_{\zeta_{q}}^{d}\right)}_{T^{*}\mathbb{R}_{\zeta_{p}}^{d}}
\end{eqnarray*}
With respect to these coordinates the differential of the canonical
map $DF_{\rho}:T_{\rho}\left(T^{*}M\right)\rightarrow T_{F\left(\rho\right)}\left(T^{*}M\right)$
is expressed as 
\begin{equation}
D\Phi\circ DF_{\rho}\circ D\Phi^{-1}=F^{\left(1\right)}\oplus F^{\left(2\right)},\qquad F^{\left(1\right)}\equiv\left(\begin{array}{cc}
A_{x} & 0\\
0 & ^{t}A_{x}^{-1}
\end{array}\right),\quad F^{\left(2\right)}\equiv\left(\begin{array}{cc}
A_{x} & 0\\
0 & ^{t}A_{x}^{-1}
\end{array}\right)\label{eq:expression_Ax}
\end{equation}
where $A_{x}\equiv Df\mid_{E_{u}\left(x\right)}:\mathbb{R}^{d}\rightarrow\mathbb{R}^{d}$
is an expanding linear map.
\end{prop}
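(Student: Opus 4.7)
The plan is to exploit the remaining freedom in the Darboux coordinates of Proposition \ref{prop:Normal-coordinates.}: composing a Darboux chart on a neighborhood of $x$ with a linear symplectic change of the model space still yields Darboux coordinates, so one can align the coordinate axes at the central point with any chosen symplectic basis of $T_xM$.

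First I would apply Proposition \ref{prop:Normal-coordinates.} to fix some Darboux coordinates $(q,p)$ near $x$ and, independently, near $f(x)$. Since $E_u(x)$ and $E_s(x)$ are transverse Lagrangian subspaces of $(T_xM,\omega)$ by Remark \ref{Remark:Holder_exp_beta}(1), the form $\omega$ induces a non-degenerate pairing $E_u(x)\times E_s(x)\to\mathbb{R}$. Picking any basis $(e_1,\ldots,e_d)$ of $E_u(x)$ and letting $(f_1,\ldots,f_d)$ be its $\omega$-dual basis inside $E_s(x)$ provides a symplectic basis of $T_xM$ adapted to the splitting. A linear symplectic substitution brings this basis to the standard $(\partial_{q^i},\partial_{p^j})$; the resulting chart is still Darboux, and in it $E_u(x)=\mathbb{R}^d_q$ and $E_s(x)=\mathbb{R}^d_p$ at the central point. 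The analogous adjustment is done near $f(x)$.

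In these adapted coordinates $Df_x$ preserves the splitting $E_u\oplus E_s$ and is symplectic, so its matrix has block-diagonal form $\mathrm{diag}(A_x,B_x)$ with $A_x=Df_x|_{E_u(x)}$ expanding; the symplecticity condition $(Df_x)^{\top}J\,Df_x=J$ then forces $B_x={}^tA_x^{-1}$. From the global expression (\ref{eq:new_expression_of_Fab}) and the fact that at $\zeta=0$ the $y$-derivative of ${}^tDf_y^{-1}\zeta$ vanishes, I obtain
\[
DF_\rho(dq,dp,d\zeta_q,d\zeta_p)=(A_x\,dq,\;{}^tA_x^{-1}\,dp,\;{}^tA_x^{-1}\,d\zeta_q,\;A_x\,d\zeta_p).
\]
Substituting $dq=d\nu_q+d\zeta_p$ and $dp=d\nu_p-d\zeta_q$ from (\ref{eq:def_nu-1}) and collecting terms, the $\zeta_p$-contribution in the first slot cancels against the transformed $d\zeta_p$, and likewise for $\zeta_q$ in the second slot; one ends up with
\[
d\nu_q\mapsto A_x\,d\nu_q,\quad d\nu_p\mapsto {}^tA_x^{-1}\,d\nu_p,\quad d\zeta_p\mapsto A_x\,d\zeta_p,\quad d\zeta_q\mapsto {}^tA_x^{-1}\,d\zeta_q,
\]
which is exactly $F^{(1)}\oplus F^{(2)}$ of (\ref{eq:expression_Ax}). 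Since $A_x$ is expanding, $\mathbb{R}^d_{\nu_q}$ and $\mathbb{R}^d_{\zeta_p}$ are the unstable axes of $F^{(1)}$ and $F^{(2)}$ respectively, matching $E_u^{(1)}$ and $E_u^{(2)}$, and similarly for the stable axes.

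I do not expect a serious obstacle: every step is either elementary linear symplectic algebra (existence of a symplectic basis adapted to a Lagrangian splitting, block form of a symplectic map preserving a Lagrangian decomposition) or a direct pullback of (\ref{eq:new_expression_of_Fab}) through the change of variables (\ref{eq:def_nu-1}). The mildly delicate part is the bookkeeping in that last substitution, but the cancellations between $dq$ and $d\zeta_p$, and between $dp$ and $d\zeta_q$, are precisely what yields the $\Omega$-orthogonal block-diagonal structure announced by the decomposition (\ref{eq:TK_TKorth})--(\ref{eq:def_E_u_s_1}).
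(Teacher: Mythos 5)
Your proposal follows the same route as the paper: align the Darboux chart axes at the base point with the Lagrangian splitting $E_u(x)\oplus E_s(x)$ (which the freedom in Darboux coordinates allows), use symplecticity to obtain the block form $Df_x = \mathrm{diag}(A_x,{}^tA_x^{-1})$, and then read off the block structure of $DF_\rho$ in the $(\nu,\zeta)$ variables. You spell out the last step — pushing $(\delta q,\delta p,\delta\zeta_q,\delta\zeta_p)\mapsto(A_x\delta q,{}^tA_x^{-1}\delta p,{}^tA_x^{-1}\delta\zeta_q,A_x\delta\zeta_p)$ through the substitution $\nu_q=q-\zeta_p$, $\nu_p=p+\zeta_q$ — more explicitly than the paper, which simply identifies $B_x\cong{}^tA_x^{-1}$ via the flat map and asserts the result; this added explicitness is a small improvement but not a different argument.
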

}}}\end{center}
\begin{proof}
We can take the coordinates $x=\left(q,p\right)=\left(q^{1},\ldots q^{d},p^{1},\ldots p^{d}\right)$
in the beginning of the proof of the last proposition so that the
stable and unstable subspaces, $E_{s}(x)$ and $E_{u}(x)$, correspond
to the subspaces given by the equations $p=0$ and $q=0$ respectively,
because they are Lagrangian subspaces. Then the coordinates and the
trivialization constructed in the proof have the required property.
The differential of the Anosov map $Df_{x}:T_{x}M\rightarrow T_{f\left(x\right)}M$
splits according to the invariant decomposition $T_{x}M=E_{u}\left(x\right)\oplus E_{s}\left(x\right)$
as $Df_{x}=\left(A_{x},B_{x}\right)$ with
\[
A_{x}:=Df\mid_{E_{u}\left(x\right)}:E_{u}\left(x\right)\rightarrow E_{u}\left(f\left(x\right)\right)
\]
and
\[
B_{x}:=Df\mid_{E_{s}\left(x\right)}:E_{s}\left(x\right)\rightarrow E_{s}\left(f\left(x\right)\right)
\]
But since $E_{u}\left(x\right)$ and $E_{s}\left(x\right)$ are Lagrangian
subspaces, $\omega$ provides an isomorphism $\flat:E_{s}\left(x\right)\rightarrow E_{u}\left(x\right)^{*}$
defined by $\flat\left(S\right):U\in E_{u}\left(x\right)\rightarrow\omega\left(S,U\right)\in\mathbb{R}$
for $S\in E_{s}\left(x\right)$. Since $Df_{x}$ is symplectic, i.e.
preserves $\omega$, then $B_{x}=\flat\circ{}^{t}A_{x}^{-1}\circ\flat^{-1}$
i.e. $B_{x}$ is isomorphic to 
\[
^{t}A_{x}^{-1}:E_{u}^{*}\left(x\right)\rightarrow E_{u}^{*}\left(f\left(x\right)\right)
\]
So with this identification we have 
\begin{equation}
Df_{x}\equiv\left(\begin{array}{cc}
A_{x} & 0\\
0 & ^{t}A_{x}^{-1}
\end{array}\right)\label{eq:Df_f_with_A}
\end{equation}
 
\end{proof}

\subsubsection{\label{sub:2.1.3}Microlocal description of the prequantum transfer
operator near the trapped set}

In order to focus on the action of the canonical map $F:T^{*}M\rightarrow T^{*}M$
on the vicinity of the trapped set $K$ and relate it to the spectral
properties of the prequantum transfer operator $\hat{F_{N}}$, we
use an\emph{ escape function} (or a \emph{weight function}) $W_{\hbar}\left(x,\xi\right)$
on the phase space, which decreases strictly along the flow outside
a vicinity of the trapped set $K$, and use it to define some associate
norm and associated\textbf{ }\emph{anisotropic Sobolev spaces}%
\footnote{This idea of defining a generalized Sobolev space using an escape
(or weight) function on the phase space has been used several times
before and is not new in this paper. For instance, in the context
of semiclassical scattering theory in the phase space, such an idea
was developed by B. Helffer and J. Sjöstrand in \cite{sjostrand_87}.
In the context of transfer operators for hyperbolic dynamical systems,
it was developed in \cite{Baladi05,baladi_05} more recently. %
}. From the fact that the trapped set is compact in phase space and
from the property of the escape function mentioned above, we deduce
that the spectrum of the prequantum transfer operator $\hat{F}_{N}$
is discrete in these anisotropic Sobolev spaces \cite{fred-roy-sjostrand-07}.
The eigenvalues are called \textbf{``resonances''} (from the physical
meaning in scattering theory).

Here is how we will proceed in the proofs:
\begin{enumerate}
\item In Section \ref{ss:local_charts} we will consider a system of local
charts on the manifold $M$ depending on $\hbar$ (or on $N$), which
is of small size $\hbar^{1/2-\theta}\ll1$ in the semiclassical limit
$\hbar\rightarrow0$, and then consider the local trivializations
of the prequantum bundle $P\rightarrow M$ on each chart, as in Proposition
\ref{prop:Normal-coordinates.} and \ref{prop:DF_ro}. On each of
such charts, the map $f$ is approximated at first order by its linear
approximation, and correspondingly the canonical map $F$ is approximated
on the trapped set by its differential $D_{\rho}F$ given in (\ref{eq:expression_Ax}).
In Section \ref{sub:The-prequantum-transfer_decomposed}, we show
how to decompose the global prequantum operator into ``prequantum
operators on charts'' and how to recompose it, i.e. passing from
global to local and \emph{vice versa}.
\item In view of the decomposition above called ``microlocal decomposition'',
we study first the prequantum operator associated to a linear hyperbolic
map. This is done in Section \ref{sec:Resonance-of-hyperbolic_preq_4}.
The prequantum operator for a linear hyperbolic map turns out to be
the tensor product of two operators, according to the decomposition
(\ref{eq:expression_Ax}): one operator is a prequantum operator associated
to the linear map tangent to the trapped set $T_{\rho}K$ and the
other one is associated to the linear map in its (symplectic) orthogonal
$\left(T_{\rho}K\right)^{\perp}$. The first part is a unitary operator,
while the second part is treated by using the property of the escape
function $W_{\hbar}\left(x,\xi\right)$ as described above and shown
to display a discrete decomposition (a discrete spectrum) in Proposition
\ref{pp:structure_of_hLA}. For rigorous argument, we will present
some technical tools first: 

\begin{enumerate}
\item the Bargmann transform $\mathcal{B}_{\hbar}$ which represents functions
and operators in phase space, is explained in Section \ref{sub:The-Bargmann-transform}.
\item definition of the escape function in phase space and the associated
anisotropic Sobolev space, for the hyperbolic dynamics orthogonal
to the trapped set, in Section \ref{ss:wL2}. 
\end{enumerate}
\end{enumerate}
Let us now give a quick description of the tensor product decomposition
of point (2) above in order to explain how the band structure estimates
(\ref{eq:def_r+_r-}) may be obtained. Eq.(\ref{eq:expression_Ax})
gives a description of the differential $DF_{\rho}$ of the canonical
map $F$ at the point $\rho\in K$ of the trapped set. It is therefore
natural to consider similarly the transfer operator ``microlocally''
at point $\rho$ and express it as an operator $\mathcal{L}_{\rho}$
which somehow ``quantizes'' the symplectic map $DF_{\rho}$. We
will see in Section \ref{sec:Resonance-of-hyperbolic_preq_4} that
this operator has the following form (which can easily be guessed
from (\ref{eq:expression_op_locale}))
\[
\mathcal{L}_{\rho}=e^{V_{x}}e^{-i2\pi N\mathcal{A}_{x}}\hat{F}^{\left(1\right)}\otimes\hat{F}^{\left(2\right)}
\]
with constants $V_{x},\mathcal{A}_{x}$ with $x=\tilde{\pi}\left(\rho\right)\in M$
and where $\hat{F}^{\left(1\right)},\hat{F}^{\left(2\right)}$ are
the metaplectic operators associated to the linear symplectic map
$F^{\left(1\right)},F^{\left(2\right)}$ respectively, given by: 
\[
\hat{F}^{\left(1\right)}=\hat{F}^{\left(2\right)}=\hat{F}:\begin{cases}
C^{\infty}\left(\mathbb{R}^{d}\right) & \rightarrow C^{\infty}\left(\mathbb{R}^{d}\right)\\
u & \rightarrow\left|\mathrm{det}A_{x}\right|^{-1/2}\cdot u\circ A_{x}^{-1}
\end{cases}
\]
with $A_{x}=Df\mid_{E_{u}\left(x\right)}:\mathbb{R}^{d}\rightarrow\mathbb{R}^{d}$
being the linear expanding map. Recall that the linear map $F^{\left(1\right)}:T_{\rho}K\rightarrow T_{\rho}K$
acts on the tangent space of the trapped set. Observe that $\hat{F}^{\left(1\right)}$
is unitary in $L^{2}\left(\mathbb{R}^{d}\right)$ with spectrum on
the unit circle. On the other hand $F^{\left(2\right)}:\left(T_{\rho}K\right)^{\perp}\rightarrow\left(T_{\rho}K\right)^{\perp}$
acts on the orthogonal symplectic, transverse to the trapped set.
It is a hyperbolic linear map and we will see in Section \ref{sec:Resonances-of-linear_exp_3},
Theorem \ref{pp:structure_of_hLA}, that correspondingly the operator
$\hat{F}^{\left(2\right)}:H^{r}\left(\mathbb{R}^{d}\right)\rightarrow H^{r}\left(\mathbb{R}^{d}\right)$
in anisotropic Sobolev spaces has a discrete spectrum contained in
bands indexed by $k\in\mathbb{N}$:
\[
\left|\mathrm{det}A_{x}\right|^{-1/2}\left\Vert A_{x}\right\Vert _{\mathrm{max}}^{-k}\leq\left|z\right|\leq\left|\mathrm{det}A_{x}\right|^{-1/2}\left\Vert A_{x}\right\Vert _{\mathrm{min}}^{-k}.
\]
(We get in fact a more precise description of the operator $\hat{F}^{\left(2\right)}$
as a sum of invertible operators with controlled norms). This implies
that the operator $\mathcal{L}_{\rho}:L^{2}\left(\mathbb{R}^{d}\right)\otimes H^{r}\left(\mathbb{R}^{d}\right)\rightarrow L^{2}\left(\mathbb{R}^{d}\right)\otimes H^{r}\left(\mathbb{R}^{d}\right)$
has its spectrum contained in bands $r_{k}^{-}\left(x\right)\leq\left|z\right|\leq r_{k}^{+}\left(x\right)$
with 
\[
r_{k}^{\pm}\left(x\right)=\exp\left(D\left(x\right)\right)\left\Vert Df\mid_{E_{u}\left(x\right)}\right\Vert _{\mathrm{max/min}}^{-k},\qquad D\left(x\right):=V_{x}-\frac{1}{2}\log\mathrm{det}Df\mid_{E_{u}\left(x\right)}
\]
After averaging along the trajectory starting from $x$ during time
$n$ and taking the $\mathrm{min}/\mathrm{max}$ over $x\in M$ we
see that we obtain the estimates $r_{k}^{\pm}$ of (\ref{eq:def_r+_r-}).
In fact the precise justification of this last step in the purpose
of Sections 5 and 6 and proceed as follows:
\begin{enumerate}
\item In Section \ref{sec:Nonlinear-prequantum-maps} we develop some results
that will be used in Section \ref{sec:Proofs-of-the_mains} in order
to show that non-linearity of the map $f$ can be neglected in the
reconstruction of the global prequantum operator from its local parts.
\item In Section \ref{sec:Proofs-of-the_mains}, we assemble all these previous
results and obtain the proof of the main Theorem \ref{thm:More-detailled-description_of_spectrum}.
\end{enumerate}
Finally let us mention that Sections \ref{sec:Resonances-of-linear_exp_3}
and \ref{sec:Resonance-of-hyperbolic_preq_4}, where we study the
resonances of the linear model, as explained above, are the core of
our argument because they reveal the main mechanism responsible for
the band structure of the spectrum described in Theorem \ref{thm:band_structure}.
This mechanism was discovered in the paper \cite{fred-PreQ-06} originally
in the study of the prequantum linear cat map.

\section{\label{sec:Covariant-derivative-}Covariant derivative $D$ and spectrum
of the rough Laplacian $\Delta$}

In this Section we provide additional results concerning the rough
Laplacian operator $\Delta_{N}$ and its spectrum in the semiclassical
limit $N\rightarrow\infty$. The rough Laplacian is defined from the
covariant derivative $D$ as $\Delta=D^{*}D$. The main result presented
in Theorem \ref{thm:band_structure-of_Laplacian} shows that the bottom
of the spectrum of $\Delta_{N}$ in $L_{N}^{2}\left(P\right)$ has
discrete spectrum with a band structure on the positive real line
(beware that $\Delta$ is positive and self-adjoint). We show that
the spectral projector $\Pi_{0}$ of the prequantum operator $\hat{F}_{N}$
on its external band \emph{$\mathcal{A}_{0}$} has the same rank as
the spectral projector $\mathfrak{P}_{0}$ of $\Delta_{N}$ on its
first band. Since it is known from \cite[th. 2]{guillemin_1988_laplace}\cite[cor. 1.2]{ma_02_spin_C}
that this rank is given by an index formula, we deduce (\ref{eq:Atiyah-Singer})
and (\ref{eq:index_formula_quantum_op}). The spectral projector $\mathfrak{P}_{0}$
of the Laplacian $\Delta_{N}$ is also used in definition \ref{def_toeplitz}
page \pageref{def_toeplitz} to define the Toeplitz quantum space
and the Toeplitz quantization.

\subsection{The covariant derivative $D$}

The connection one form $A$ on $P$ given in Theorem \ref{thm:Bundle-P_map_f_tilde}
induces a differential operator $D$ called the covariant derivative.
We recall its general definition and give its expression in local
coordinates. We will use it for the definition of the rough Laplacian
operator $\Delta$ in the Section \ref{sub:The-rough-Laplacian} and
also to treat the affine models on \noun{$\mathbb{R}^{2d}$} in Section
\ref{sub:Prequantum-operator-on_R2d}.

Recall that the exterior derivative $du$ of a function $u\in C^{\infty}\left(P\right)$
(i.e. the differential of $u$) is defined at point $p\in P$ by 
\[
\left(du\right)_{p}\left(V\right)=V\left(u\right)\left(p\right)
\]
where $V\in T_{p}P$ is any tangent vector. The connection one form
$A\in C^{\infty}\left(P,\Lambda^{1}\otimes\left(i\mathbb{R}\right)\right)$
on the principal bundle $\pi:P\rightarrow M$ defines a splitting
of the tangent space at every point $p\in P$:
\[
T_{p}P=V_{p}P\oplus H_{p}P
\]
with $H_{p}P=\mbox{Ker}\left(A\left(p\right)\right)$ and $V_{p}P=\mbox{Ker}\left(\left(D\pi\right)\left(p\right)\right)$.
The subspaces $V_{p}P$ and $H_{p}P$ are called respectively vertical
subspace and horizontal subspace. We will denote 
\begin{equation}
H:T_{p}P\rightarrow H_{p}P\label{eq:def_projection_H}
\end{equation}
the projection onto the horizontal space with $\mbox{Ker}\left(H\right)=V_{p}P$.
Explicitly if $V\in T_{p}P$ then from (\ref{eq:normalization_A})
its horizontal component is
\begin{equation}
HV=V+iA\left(V\right)\frac{\partial}{\partial\theta}.\label{eq:expression_of_HV}
\end{equation}
This can be checked easily from the requirements that $A\left(HV\right)=0$
and $V-HV\in V_{p}P$.

\begin{framed}%
\begin{defn}
If $u\in C^{\infty}\left(P\right)$ is a smooth function, its\emph{
exterior covariant derivative}\textbf{ }$Du\in C^{\infty}\left(P;\Lambda^{1}\right)$
is a one form on $P$ defined by
\begin{equation}
\left(Du\right)_{p}\left(V\right)=\left(\left(HV\right)\left(u\right)\right)\left(p\right)=\left(du\left(HV\right)\right)\left(p\right)\quad\mbox{for }p\in P,V\in T_{p}P\label{eq:def_Du}
\end{equation}
\end{defn}
\end{framed}

The operator $D$ is equivariant with respect to (or commutes with)
the $\mathbf{U}(1)$ action (\ref{eq:action_U1}) in $P$ and therefore
restricts naturally to the space (\ref{eq:def_C_N_P}): 
\[
D:C_{N}^{\infty}\left(P\right)\rightarrow C_{N}^{\infty}\left(P,\Lambda^{1}\right)
\]
 for every $N\in\mathbb{Z}$.

\subsubsection{\label{sub:Expression-of-D}Expression of $D$ in local charts}

\begin{center}{\color{blue}\fbox{\color{black}\parbox{16cm}{
\begin{prop}
\label{prop:expression_of_D}With respect to the local trivialization
(\ref{eq:local_trivialization-1}) of the bundle $P$ over open sets
$U_{\alpha}\subset M$, if $u\in C_{N}^{\infty}\left(P\right)$ then
$Du$ is expressed as the first order differential operator $D_{\alpha}:C^{\infty}\left(U_{\alpha}\right)\rightarrow C^{\infty}\left(U_{\alpha},\Lambda^{1}\right)$
given by:

\begin{equation}
D_{\alpha}u_{\alpha}=du_{\alpha}+\frac{i}{\hbar}u_{\alpha}\eta_{\alpha}\label{eq:D_in_local_chart}
\end{equation}
with $u_{\alpha}:=\left(u\circ\tau_{\alpha}\right)\in C^{\infty}\left(U_{\alpha}\right)$
as in (\ref{eq:def_u_alpha}) and $D_{\alpha}u_{\alpha}:=\left(Du\right)\circ\tau_{\alpha}\in C^{\infty}\left(U_{\alpha},\Lambda^{1}\right)$.
More specifically, in the normal coordinates $x=\left(q,p\right)=\left(q^{1},\ldots q^{d},p^{1},\ldots p^{d}\right)$
and the local trivialization on $U_{\alpha}$ in Proposition \ref{prop:Normal-coordinates.},
it is expressed as 
\begin{equation}
D_{\alpha}u_{\alpha}=\frac{i}{\hbar}\left(\sum_{j=1}^{d}\left(\widehat{\zeta_{q}^{j}}u_{\alpha}\right)dq^{j}+\left(\widehat{\zeta_{p}^{j}}u_{\alpha}\right)dp^{j}\right)\label{eq:expression_of_D-1}
\end{equation}
with the basis $\left(dq,dp\right)$ of $\Lambda_{x}^{1}$, where
$\widehat{\zeta_{q}^{j}}$ and $\widehat{\zeta_{p}^{j}}$ are the
differential operators on $C^{\infty}\left(\mathbb{R}^{2d}\right)$
defined respectively by
\begin{eqnarray}
\widehat{\zeta_{q}^{j}} & := & \widehat{\xi_{q}^{j}}-\frac{1}{2}p^{j}\qquad\mbox{with }\widehat{\xi_{q}^{j}}:=-i\hbar\frac{\partial}{\partial q^{j}}\label{eq:def_zeta_hat-1}\\
\widehat{\zeta_{p}^{j}} & := & \widehat{\xi_{p}^{j}}+\frac{1}{2}q^{j}\qquad\mbox{with }\widehat{\xi_{p}^{j}}:=-i\hbar\frac{\partial}{\partial p^{j}}.\nonumber 
\end{eqnarray}

\end{prop}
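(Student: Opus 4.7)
The plan is to prove both formulas by direct computation from the definition (\ref{eq:def_Du}), exploiting the equivariance of $u\in C_N^\infty(P)$ and the explicit local expression of the connection.

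First, I would work in the local trivialization $T_\alpha$ of (\ref{eq:local_trivialization-1}), writing the equivariant function $u$ as $\tilde u(x,\theta)=e^{iN\theta}u_\alpha(x)$. For a tangent vector $V=(V_x,V_\theta)\in T_{(x,\theta)}(U_\alpha\times\mathbf{U}(1))$ I would compute the horizontal lift formula (\ref{eq:expression_of_HV}): solving $A(V+\lambda\partial/\partial\theta)=0$ using the normalization (\ref{eq:normalization_A}) yields $\lambda=iA(V)$, hence $HV=V+iA(V)\partial/\partial\theta$. Combined with $(Du)(V)=d\tilde u(HV)$, this gives
\[
(Du)(V)=d\tilde u(V)+iA(V)\cdot iN\tilde u=d\tilde u(V)-NA(V)\tilde u.
\]
Substituting $A=i\,d\theta-2\pi i\,\eta_\alpha$ (the pullback formula (\ref{eq:def_eta-1})) and expanding $d\tilde u=iNe^{iN\theta}u_\alpha\,d\theta+e^{iN\theta}du_\alpha$ the $d\theta$-contributions cancel and the remaining terms give $e^{iN\theta}\bigl(du_\alpha+2\pi iN\,\eta_\alpha u_\alpha\bigr)$. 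Pulling back along $\tau_\alpha$ (i.e.\ restricting to $\theta=0$ and taking $V_\theta=0$) and using $2\pi N=1/\hbar$ yields precisely (\ref{eq:D_in_local_chart}).

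Second, I would specialize to the normal coordinates of Proposition \ref{prop:Normal-coordinates.}, where $\eta_\alpha=\sum_j\frac{1}{2}(q^j dp^j-p^j dq^j)$ by (\ref{eq:eta_euclidian_model-1}). Plugging this into (\ref{eq:D_in_local_chart}) and grouping the $dq^j$ and $dp^j$ components, I get
\[
D_\alpha u_\alpha=\sum_{j}\Bigl(\partial_{q^j}u_\alpha-\tfrac{i}{2\hbar}p^j u_\alpha\Bigr)dq^j+\Bigl(\partial_{p^j}u_\alpha+\tfrac{i}{2\hbar}q^j u_\alpha\Bigr)dp^j.
\]
Factoring out $i/\hbar$ and recognizing $-i\hbar\partial_{q^j}=\widehat{\xi_q^j}$ and $-i\hbar\partial_{p^j}=\widehat{\xi_p^j}$, the coefficients become exactly $\widehat{\zeta_q^j}=\widehat{\xi_q^j}-\tfrac{1}{2}p^j$ and $\widehat{\zeta_p^j}=\widehat{\xi_p^j}+\tfrac{1}{2}q^j$ from (\ref{eq:def_zeta_hat-1}), giving (\ref{eq:expression_of_D-1}).

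There is no real obstacle here: the proof is a bookkeeping exercise. The only delicate point is keeping track of the conventions, namely the $i\mathbb{R}$-valued connection, the $-i2\pi$ in (\ref{eq:def_eta-1}), and the identification $1/\hbar=2\pi N$, so that the cancellations between the vertical component of $d\tilde u$ and the $d\theta$ term of $A$ happen correctly and leave precisely $(i/\hbar)\eta_\alpha$ as the covariant correction.
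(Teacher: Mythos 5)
Your proof is correct and follows essentially the same route as the paper: compute $HV$ from (\ref{eq:expression_of_HV}) using the local expression $T_\alpha^*A=i\,d\theta-2\pi i\,\eta_\alpha$, apply $HV$ to the equivariant function $\tilde u=e^{iN\theta}u_\alpha$ so that $\partial_\theta\tilde u=iN\tilde u$, observe the cancellation of $d\theta$-terms, pull back along $\tau_\alpha$ to get (\ref{eq:D_in_local_chart}), and then substitute the explicit one-form $\eta_\alpha$ from (\ref{eq:eta_euclidian_model-1}) to get (\ref{eq:expression_of_D-1}). The only cosmetic difference is that the paper first fully resolves $HV=V^x\partial_x+2\pi\eta_\alpha(V)\partial_\theta$ (noting the $V^\theta$ component drops out) before applying it, whereas you keep $A(V)$ symbolic a step longer and see the same cancellation when expanding $d\tilde u$; the substance is identical.
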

}}}\end{center}

\begin{rem}
\label{rem:2.22}Notice that the canonical variables $\left(\zeta_{q},\zeta_{p}\right)$
defined in (\ref{eq:def_zeta_j-1-1}) are the symbol of the operators
(\ref{eq:def_zeta_hat-1}) as a pseudodifferential operator. In more
geometrical terms, the symbol of the covariant derivative $-i\hbar D$
is the one form $\sigma\left(-i\hbar D\right)=\zeta dx=\sum_{j}\zeta_{q}^{j}dq^{j}+\zeta_{p}^{j}dp^{j}$
on $T^{*}U_{\alpha}$. This can be understood as a generalization
of the simpler case of the exterior derivative $d:C^{\infty}\left(M,\Lambda^{p}\right)\rightarrow C^{\infty}\left(M,\Lambda^{p+1}\right)$
(by taking $\eta=0$, i.e. a connection with zero curvature), for
which the principal symbol is known to be $\sigma\left(d\right)=\frac{i}{\hbar}\left(\xi dx\right)\wedge.$,
\cite[(10.12) on p.162]{taylor_tome1}.\end{rem}
\begin{proof}
Consider local coordinates $x=\left(x^{1},\ldots x^{2d}\right)\in U_{\alpha}\subset M$
and a local trivialization of $P$ giving local coordinates $\left(x,\theta\right)$
on $P$. Let $V=V^{x}\frac{\partial}{\partial x}+V^{\theta}\frac{\partial}{\partial\theta}$
be a vector field on $P$ expressed in these local coordinates. From
(\ref{eq:expression_of_HV}) and (\ref{eq:def_eta-1}) we have
\begin{eqnarray*}
HV & = & V+iA\left(V\right)\frac{\partial}{\partial\theta}=V+\left(-d\theta\left(V\right)+2\pi\eta_{\alpha}\left(V\right)\right)\frac{\partial}{\partial\theta}\\
 & = & V^{x}\frac{\partial}{\partial x}+2\pi\eta_{\alpha}\left(V\right)\cdot\frac{\partial}{\partial\theta}.
\end{eqnarray*}
Then, from the definition (\ref{eq:def_Du}),
\[
\left(Du\right)\left(V\right)=\left(HV\right)\left(u\right)=V^{x}\frac{\partial u}{\partial x}+2\pi\eta_{\alpha}\left(V\right)\frac{\partial u}{\partial\theta}.
\]
Suppose now that $u\in C_{N}^{\infty}\left(P\right)$ and write $p=e^{i\theta}\tau_{\alpha}\left(x\right)\in P$.
Then 
\[
u\left(p\right)=u\left(e^{i\theta}\tau_{\alpha}\left(x\right)\right)=e^{iN\theta}u\left(\tau_{\alpha}\left(x\right)\right)=e^{iN\theta}u_{\alpha}\left(x\right)
\]
and
\begin{eqnarray*}
\left(Du\right)_{p}\left(V\right) & = & e^{iN\theta}\left(V^{x}\frac{\partial u_{\alpha}}{\partial x}+iN2\pi\eta_{\alpha}\left(V\right)u_{\alpha}\right)\\
 & = & e^{iN\theta}\left(du_{\alpha}\left(V\right)+iN2\pi\eta_{\alpha}\left(V\right)u_{\alpha}\right)\\
 & = & e^{iN\theta}\left(du_{\alpha}+\frac{i}{\hbar}u_{\alpha}\eta_{\alpha}\right)(V).
\end{eqnarray*}
Hence
\[
D_{\alpha}u_{\alpha}=\left(Du\right)_{p}\circ\tau_{\alpha}=du_{\alpha}+\frac{i}{\hbar}u_{\alpha}\eta_{\alpha}
\]
We obtain the rest of the claims by simple calculation.
\end{proof}

\subsection{\label{sub:The-rough-Laplacian}The rough Laplacian $\Delta$}

In order to define the adjoint operator $D^{*}$ and the Laplacian
$\Delta=D^{*}D$ which is used in \emph{geometric quantization} we
need an additional structure on the manifold $M$, namely a metric
$g$ compatible with $\omega$. References are \cite[p.400]{tuynman_book_supermanifolds_04},\cite[p.168]{berline_book_92},\cite[p.504]{taylor_tome2}.

\subsubsection{\label{sub:Compatible-metrics-and}Compatible metrics and Laplacian}

We recall \cite[p.72]{da_silva_01} that on a symplectic manifold
$\left(M,\omega\right)$, there exists a Riemannian metric $g$ \emph{compatible}
with $\omega$ in the sense that there exists an \emph{almost complex
structure}%
\footnote{\emph{An almost complex structure $J$ on $M$ is a section of the
bundle $\mathrm{Hom}\left(TM,TM\right)$ such that $J\circ J=-\mathrm{Id}$.}%
} $J$ on $M$ such that
\begin{equation}
\omega\left(Ju,Jv\right)=\omega\left(u,v\right)\mbox{ and }g\left(u,v\right)=\omega\left(u,Jv\right)\mbox{ for all }x\in M\mbox{ and }u,v\in T_{x}M.\label{eq:def_g_compatible}
\end{equation}
In general $J$ is not integrable, i.e. it is not a complex structure.
In the rest of this Section we suppose given such a metric $g$ and
an almost complex structure $J$ on $M$.

The metric $g$ on $M$ induces an equivariant metric $g_{P}$ on
$P$ by declaring that \cite[ex.1,ex.2 p.508]{taylor_tome2}:
\begin{enumerate}
\item for every point $p\in P$, $V_{p}P\perp H_{p}P$ are orthogonal,
\item on the horizontal space $H_{p}P$, $g_{P}$ is the pull back of $g$
by $\pi$: $\left(g_{P}\right)_{/H_{p}P}=\pi^{*}\left(g\right)$ 
\item on the vertical space $V_{p}P$, $g_{P}$ is the canonical (Killing)
metric on $\mathfrak{u}\left(1\right)$ i.e. $\left\Vert \frac{\partial}{\partial\theta}\right\Vert _{g_{P}}=1$. 
\end{enumerate}
This metric $g_{P}$ induces a $L^{2}$ scalar product $\langle\alpha|\beta\rangle_{\Lambda^{1}\left(p\right)}$
in the space of one forms $\Lambda^{1}\left(p\right)$ at point $p\in P$.
Using the volume form $\mu_{P}$ on $P$ in (\ref{eq:volume_form_on_P}),
we define a $L^{2}$ scalar product in the space of differential one
forms $C^{\infty}\left(P,\Lambda^{1}\right)$ by 
\[
\langle\alpha|\beta\rangle_{L^{2}\left(P,\Lambda^{1}\right)}:=\int\langle\alpha\left(p\right)|\beta\left(p\right)\rangle_{\Lambda^{1}\left(p\right)}d\mu_{P}(p)\text{\quad\mbox{for }}\alpha,\beta\in C^{\infty}\left(P,\Lambda^{1}\right)
\]
The $L^{2}$ product of functions is of course defined by

\[
\langle u|v\rangle_{L^{2}\left(P\right)}:=\int\overline{u\left(p\right)}\cdot v\left(p\right)d\mu_{P}(p)\text{\quad\mbox{for }}u,v\in C^{\infty}\left(P\right).
\]
\[
\]
Then the operators $D^{*}$ and $\Delta$ are defined as follows.

\begin{framed}%
\begin{defn}
The \textbf{adjoint covariant derivative} $D^{*}:C^{\infty}\left(P,\Lambda^{1}\right)\rightarrow C^{\infty}\left(P\right)$
is defined by the relation
\[
\langle u|D^{*}\alpha\rangle_{L^{2}\left(P,\Lambda^{1}\right)}=\langle Du|\alpha\rangle_{L^{2}\left(P\right)}\quad\mbox{for all }u\in C^{\infty}\left(P\right)\quad\mbox{and }\alpha\in C^{\infty}\left(P,\Lambda^{1}\right).
\]
The \emph{rough Laplacian} $\Delta:C^{\infty}\left(P\right)\rightarrow C^{\infty}\left(P\right)$
is defined as the composition 
\[
\Delta=D^{*}D.
\]
\end{defn}
\end{framed}

The operators introduced above are equivariant, i.e. $D^{*}R_{\theta}=R_{\theta}D^{*}$
and $\Delta R_{\theta}=R_{\theta}\Delta$ with $R_{\theta}:p\rightarrow e^{i\theta}p$,
because so is the metric $g_{P}$. Hence $D^{*}$ and $\Delta$ restrict
naturally to

\[
D^{*}:C_{N}^{\infty}\left(P,\Lambda^{1}\right)\rightarrow C_{N}^{\infty}\left(P\right)\quad\mbox{{and}}\quad\Delta_{N}:C_{N}^{\infty}\left(P\right)\rightarrow C_{N}^{\infty}\left(P\right)
\]
for each $N\in\mathbb{Z}$. We have denoted $\Delta_{N}$ for the
restriction of $\Delta$ to $C_{N}^{\infty}\left(P\right)$. 

It is known that, for every $N\in\mathbb{Z}$, the operator $\Delta_{N}$
is an essentially self-adjoint positive operator with compact resolvent.
Hence its spectrum is discrete and consists of real positive eigenvalues.
The next theorem shows that these eigenvalues form some ``clusters''
(also called ``bands'') in the lower part of this spectrum. Precisely
the eigenvalues of $\frac{1}{2\pi N}\Delta_{N}$ concentrate around
the specific integer values $d+2k$ with $d=\frac{1}{2}\dim M$, $k\in\mathbb{N}$.
(These half-integer values correspond essentially to the eigenvalues
of a harmonic oscillator model as we will see later.) See Figure \ref{fig:Spectrum-of-the_Laplacian}.
These clusters of eigenvalues are called \emph{Landau levels} or \emph{Landau
bands} in physics. The existence of the first band is given in various
papers, see \cite[Cor 1.2]{ma_02_spin_C} and reference therein.%
\footnote{ The result for all the bands seems to be known to specialists although
it does not appear explicitly in the literature to the best of our
knowledge. %
}

\begin{framed}%
\begin{thm}
\textbf{\label{thm:band_structure-of_Laplacian}``The bottom spectrum
of $\Delta_{N}$ has band structure''}. For any $\alpha>0$, the
spectral set of the rough Laplacian $\frac{1}{2\pi N}\Delta_{N}$
in the interval $[0,\alpha]$ is contained in the $N^{-\epsilon}$-neighborhood
of the subset $\left\{ d+2k,\; k\in\mathbb{N}\right\} $ for sufficiently
large $N$ and any $0<\varepsilon<1/2$. The number of eigenvalues
in the $N^{-\epsilon}$-neighborhood of $d+2k$ (or in the $k$-th
band) is proportional to $N^{d}$, that is, if we write $\mathfrak{P}_{k}$
for the spectral projector for the eigenvalues on the $k$-th band,
we have 
\[
C^{-1}N^{d}<\mathrm{rank\,}\mathfrak{P}_{k}<CN^{d}
\]
for some constant $C$ independent of $N$. In particular, for the
spectral projector $\mathfrak{P}_{0}$ for the first band, we have
\begin{equation}
\mathrm{rank\,}\left(\mathfrak{P}_{0}\right)=\int_{M}\left[e^{N\omega}\mbox{Todd}\left(TM\right)\right]_{2d}.\label{eq:Atiyah-Singer-1}
\end{equation}
Further, for the relation to the prequantum transfer operator $\hat{F}_{N}$,
we have 
\[
\mathrm{rank\,}\mathfrak{P}_{k}=\mathrm{rank\,}\tau^{(k)}=\dim\,\mathcal{H}_{k}\quad\mbox{for }0\le k\le n
\]
for sufficiently large $N$, where $n$, $\tau^{(k)}$ and $\mathcal{H}_{k}$
are those in Theorem \ref{thm:More-detailled-description_of_spectrum}.
In particular
\begin{equation}
\mathfrak{P}_{0}:\mathrm{Im}\left(\tau^{(0)}\right)\rightarrow\mathrm{Im}\left(\mathfrak{P}_{0}\right)\label{eq:def_isomorphism-1}
\end{equation}
is a finite rank isomorphism.\end{thm}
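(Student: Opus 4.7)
The plan is to reduce $\Delta_{N}$ to a harmonic-oscillator model in semiclassical local charts. In the normal Darboux coordinates of Proposition~\ref{prop:Normal-coordinates.}, with the trivialization of $P$ in which the connection one-form is given by (\ref{eq:eta_euclidian_model-1}), the expression (\ref{eq:expression_of_D-1}) yields
\[
\hbar^{2}\Delta_{N}\;=\;\sum_{j=1}^{d}\Big((\widehat{\zeta_{q}^{j}})^{2}+(\widehat{\zeta_{p}^{j}})^{2}\Big)\;+\;R_{\hbar},
\]
where $R_{\hbar}$ collects lower-order terms coming from the non-flatness of the metric $g$ and the non-integrability of $J$. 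A direct computation from (\ref{eq:def_zeta_hat-1}) gives the canonical commutation relation $[\widehat{\zeta_{q}^{j}},\widehat{\zeta_{p}^{k}}]=-i\hbar\,\delta_{jk}$. Introducing creation/annihilation operators $a_{j}^{*},a_{j}$ built from $\widehat{\zeta_{q}^{j}}\pm i\widehat{\zeta_{p}^{j}}$ one sees that the principal part equals $\hbar(2\mathcal{N}+d)$, where $\mathcal{N}=\sum_{j}a_{j}^{*}a_{j}$ is the number operator. Thus the local model has spectrum exactly $\{\hbar^{-1}(2k+d)\}_{k\in\mathbb{N}}$, which after division by $2\pi N=\hbar^{-1}$ produces the Landau levels $\{d+2k\}$.

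I will next globalize this via a semiclassical microlocal partition of unity subordinate to charts of size $\hbar^{1/2-\theta}$, exactly as in the scheme outlined in Section~\ref{sub:Semiclassical-description-of-prequantum_op}. Standard semiclassical pseudodifferential calculus, together with a Weyl min--max argument, shows that in any bounded interval $[0,\alpha]$ the eigenvalues of $\frac{1}{2\pi N}\Delta_{N}$ are trapped in the $N^{-\varepsilon}$-neighbourhoods of $\{d+2k\}$, with the multiplicity of the $k$-th band counted locally by the number $\binom{k+d-1}{d-1}$ of ways to write $k=k_{1}+\cdots+k_{d}$. Summing this multiplicity against a phase-space cell count of size $(2\pi\hbar)^{-d}\mathrm{Vol}_{\omega}(M)$ yields the two-sided estimate $\mathrm{rank}\,\mathfrak{P}_{k}\asymp N^{d}$.

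The exact index formula (\ref{eq:Atiyah-Singer-1}) for $\mathrm{rank}\,\mathfrak{P}_{0}$ is a known geometric fact: $\mathrm{Im}(\mathfrak{P}_{0})$ is naturally identified with a generalized Bergman space on $(M,\omega,J)$, and its dimension is provided by applying the Riemann--Roch--Hirzebruch/Atiyah--Singer index theorem to the spin${}^{\mathbb{C}}$-Dirac operator twisted by $L^{\otimes N}$; I will simply invoke this through the cited references \cite{guillemin_1988_laplace,ma_02_spin_C}. The equality $\mathrm{rank}\,\mathfrak{P}_{k}=\mathrm{rank}\,\tau^{(k)}$ for $0\le k\le n$ then follows because both ranks are computed by the same local semiclassical count on each chart, the multiplicity being supported on the trapped set $K\cong M$ in either case.

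For the remaining claim—the isomorphism (\ref{eq:def_isomorphism-1})—I will establish injectivity of $\mathfrak{P}_{0}$ on $\mathrm{Im}(\tau^{(0)})$ and then invoke equality of finite ranks. Injectivity is microlocal: the eigendistributions in $\mathrm{Im}(\tau^{(0)})$ are concentrated on $K=\{\zeta=0\}\subset T^{*}M$ via the wave-packet description of Theorem~\ref{prop:express_Op_Psi}, and in each normal chart $\mathfrak{P}_{0}$ projects such a wave packet onto the transverse ground state of the local harmonic oscillator, which is non-zero since these two Gaussian profiles have non-vanishing overlap. The main obstacle will be controlling the error terms $R_{\hbar}$ coming from the non-flatness of $g$, the non-integrability of $J$, and higher-order Taylor corrections to the connection: they produce $O(\hbar^{1/2})$ symbol errors in each chart, and I must show that after glueing these remain uniformly small in the operator-norm estimates that underlie both the band localization and the pairing argument at the heart of the isomorphism claim.
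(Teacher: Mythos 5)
Your overall strategy---reduce to the harmonic-oscillator model in local Darboux charts, read off Landau levels $d+2k$, glue by a microlocal partition of unity of size $\hbar^{1/2-\theta}$, and quote the index theorem for the first band---matches the backbone of the paper's proof. However two steps you sketch are, as written, genuine gaps.

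First, the equality $\mathrm{rank}\,\mathfrak{P}_{k}=\mathrm{rank}\,\tau^{(k)}$ cannot be obtained just by saying that ``both ranks are computed by the same local semiclassical count.'' Locally, $\tau^{(k)}$ is built from the Taylor projectors $T^{(k)}$ (projection onto degree-$k$ homogeneous polynomials, acting on the anisotropic spaces $H_{\hbar}^{r}$), while $\mathfrak{P}_{k}$ is built from the harmonic-oscillator spectral projectors $Q_{\hbar}^{(k)}$ (onto $\varphi_{0}\cdot\{\text{polynomials of degree }k\}$, acting on $L^{2}$). These have the same rank but different images in different function spaces, and one needs the non-trivial fact (the paper's Lemma \ref{lm:QTproj} and Proposition \ref{pp1-3}) that $\oplus_{i\le k}Q_{\hbar}^{(i)}$ restricted to $\oplus_{i\le k}\mathrm{Im}\,T^{(i)}$, and vice versa, are uniformly bounded bijections. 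Your ``Gaussian overlap'' remark for the $k=0$ isomorphism is a reasonable heuristic for why this works at $k=0$ (indeed $(p,p\varphi_{0})_{L^{2}}=(|p|^{2},\varphi_{0})\neq0$ is exactly the injectivity argument in the paper), but it has to be upgraded to the full bijectivity statement on every chart, uniformly in $\hbar$, before it can be globalized.

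Second, and more seriously, you treat the geometric rough Laplacian $\Delta_{N}=D^{*}D$ directly, but this silently overlooks a mismatch of auxiliary data. The projectors $\tau^{(k)}$ come from local charts adapted to the H\"older-continuous hyperbolic splitting $E_{u}\oplus E_{s}$, whereas the geometric $\Delta_{N}$ is built from a smooth compatible metric $g$, whose local charts are adapted to a $g$-orthogonal decomposition. These two atlases are genuinely different, and the paper bridges them through a homotopy argument: one first defines an auxiliary rough Laplacian $\tilde\Delta_{\hbar}=\mathbf{I}_{\hbar}^{*}\circ\boldsymbol{\Delta}_{\hbar}\circ\mathbf{I}_{\hbar}$ by gluing Euclidean rough Laplacians on exactly the charts $\kappa_{i}$ used for $\hat F_{N}$, proves the band structure and the rank equality $\mathrm{rank}\,\lambda^{(k)}=\mathrm{rank}\,\tau^{(k)}$ there; one then continuously deforms the splitting $E_{u}\oplus E_{s}$ to a smooth $g$-orthogonal splitting, keeping the band structure and hence the ranks constant; and only at the end does one compare the deformed $\tilde\Delta_{\hbar,1}$ with the true geometric $\Delta_{N}$ (an $O(\hbar^{\epsilon})$ perturbation in the graph norm, to which Kato-type perturbation theory applies). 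Without this intermediate construction and homotopy, there is no way to put $\mathfrak{P}_{k}$ and $\tau^{(k)}$ ``on the same chart'' and compare them. You should make this deformation explicit; simply controlling the $R_{\hbar}$ remainder in the local model, as you propose in your last paragraph, does not address the atlas mismatch.
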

\end{framed}

\begin{figure}[h]
\begin{centering}
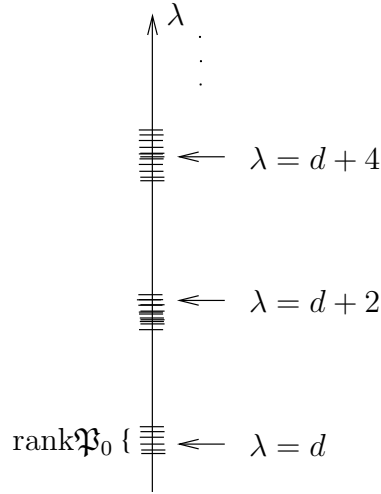\caption{\label{fig:Spectrum-of-the_Laplacian}The \emph{Landau levels }of
the spectrum of the rough Laplacian $\frac{1}{2\pi N}\Delta:L_{N}^{2}\left(P\right)\rightarrow L_{N}^{2}\left(P\right)$
for $N\gg1$.}

\par\end{centering}

\end{figure}

In the separate Section \ref{sec:Proofs-for-Laplacian} we give the
proof for Theorem \ref{thm:band_structure-of_Laplacian} concerning
the spectrum of the rough Laplacian $\Delta=D^{*}D$ . This proof
follows the same strategy as for the proof of Theorem \ref{thm:More-detailled-description_of_spectrum}
namely the Laplacian is decomposed into local charts and approximated
by an Euclidean Laplacian. We use the Harmonic oscillator described
in Section \ref{sub:The-harmonic-oscillator3.5}. In Section \ref{sub:The-rough-Laplacian_4.6},
we obtain the spectrum of the rough Euclidean Laplacian. In Section
\ref{sub:The-multiplication-operators_Laplacian} we establish some
lemmas in order to show that non-linearity can be neglected. The proof
of the index formula (\ref{eq:Atiyah-Singer-1}) is given in \cite[Th. 2]{guillemin_1988_laplace},
see also \cite[Cor. 1.2]{ma_02_spin_C} and references therein.

\subsubsection{Expression of $D^{*}$ and $\Delta$ in local charts}

Let us see the expression of differential operators $D^{*}$ and $\Delta$
introduced above in the local trivialization. Consider a local trivialization
(\ref{eq:local_trivialization-1}) of the bundle $P$ over an open
set $U_{\alpha}\subset M$. As explained in Proposition \ref{prop:expression_of_D},
the operator $D:C_{N}^{\infty}\left(P\right)\rightarrow C_{N}^{\infty}\left(P,\Lambda^{1}\right)$
in such local trivialization is represented by the differential operator
\[
D_{\alpha}:C^{\infty}\left(U_{\alpha}\right)\rightarrow C^{\infty}\left(U_{\alpha},\Lambda^{1}\right),\quad D_{\alpha}u_{\alpha}=\left(Du\right)\circ\tau_{\alpha}
\]
where $u\in C_{N}^{\infty}\left(P\right)$ and $u_{\alpha}:=\left(u\circ\tau_{\alpha}\right)\in C^{\infty}\left(U_{\alpha}\right)$.
Similarly $D^{*}:C_{N}^{\infty}\left(P,\Lambda^{1}\right)\rightarrow C_{N}^{\infty}\left(P\right)$
and $\Delta_{N}=D^{*}D$ are represented by operators 
\[
D_{\alpha}^{*}:C^{\infty}\left(U_{\alpha},\Lambda^{1}\right)\rightarrow C^{\infty}\left(U_{\alpha}\right)\quad\mbox{and }\quad\Delta_{\alpha}:C^{\infty}\left(U_{\alpha}\right)\rightarrow C^{\infty}\left(U_{\alpha}\right).
\]
The next proposition gives the explicit expression of the differential
operators $D_{\alpha}$, $D_{\alpha}^{*}$ and $\Delta_{\alpha}$
using local coordinates $\left(x^{1},\ldots,x^{2d}\right)$ on $U_{\alpha}$
(we have already obtain such an expression for $D_{\alpha}$ in Proposition
\ref{prop:expression_of_D}). Note that the operators $D_{\alpha}^{*}$
and $\Delta_{\alpha}$ depend on the Riemann metric $g$ on $M$ and
also on $N$ (or $\hbar$) though it is not explicit in the notation.
We write $g=\sum_{j,k}g_{j,k}dx^{j}\otimes dx^{k}$ for the metric
tensor and $g^{j,k}=\left(g_{j,k}\right)_{j,k}^{-1}$ for the entries
of the inverse matrix. 

\begin{framed}%
\begin{prop}
\label{prop:expression_of_D*_Delta}With respect to the local trivialization
and coordinate system described above, we have the following expressions
for $D,D^{*}$ and $\Delta=D^{*}D$:
\begin{equation}
D_{\alpha}u_{\alpha}=\frac{i}{\hbar}\sum_{j=1}^{2d}\left(\widehat{\zeta^{j}}u_{\alpha}\right)dx^{j},\qquad\mbox{with }\widehat{\zeta^{j}}=-i\hbar\frac{\partial}{\partial x^{j}}+\eta_{j},\label{eq:expression_D_a}
\end{equation}
\[
D_{\alpha}^{*}\left(\sum_{j=1}^{2d}v_{j}dx^{j}\right)=-\frac{i}{\hbar}\sum_{j,k=1}^{2d}\left(g^{j,k}\widehat{\zeta^{j}}-i\hbar\left(\partial_{j}g^{j,k}\right)\right)v_{k},
\]

\begin{equation}
\Delta_{\alpha}u_{\alpha}=\frac{1}{\hbar^{2}}\sum_{j,k=1}^{2d}\left(g^{jk}\widehat{\zeta^{j}}\widehat{\zeta^{k}}-i\hbar\left(\partial_{j}g^{jk}\right)\widehat{\zeta^{k}}\right)u_{\alpha}.\label{eq:Delta_local_coordinates}
\end{equation}
\end{prop}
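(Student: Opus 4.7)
The plan is to derive the three formulas in succession. The expression for $D_{\alpha}$ in \eqref{eq:expression_D_a} is essentially a rewriting of \eqref{eq:D_in_local_chart} from Proposition \ref{prop:expression_of_D}. Writing the connection one-form as $\eta_{\alpha}=\sum_{j}\eta_{j}\, dx^{j}$ in the chosen coordinates, the identity $D_{\alpha}u_{\alpha}=du_{\alpha}+\frac{i}{\hbar}u_{\alpha}\eta_{\alpha}$ becomes
\[
D_{\alpha}u_{\alpha}=\sum_{j}\!\left(\partial_{j}u_{\alpha}+\tfrac{i}{\hbar}\eta_{j}u_{\alpha}\right)dx^{j}=\frac{i}{\hbar}\sum_{j}\bigl(\widehat{\zeta^{j}}u_{\alpha}\bigr)dx^{j},
\]
which gives the first formula.

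For the adjoint $D_{\alpha}^{*}$, the key observation is that the coordinates $(x^{1},\dots,x^{2d})$ are taken to be Darboux coordinates (as in Proposition \ref{prop:Normal-coordinates.}) in which the symplectic form reads $\omega=\sum_{j=1}^{d}dq^{j}\wedge dp^{j}$. Since the metric $g$ is compatible with $\omega$, the Riemannian volume $\sqrt{|g|}\,dx$ equals the symplectic volume $\omega^{d}/d!$, which in these Darboux coordinates is just the Lebesgue measure $dx$; hence $\sqrt{|g|}=1$ throughout the chart. Similarly, the volume form $\mu_{P}$ of \eqref{eq:volume_form_on_P} on $P$ descends, after integration along the $U(1)$-fiber, to this same measure $dx$ on $U_{\alpha}$. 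Thus for $u\in C_{N}^{\infty}(P)$ and a horizontal one-form $\beta=\sum_{k}v_{k}\,dx^{k}$ with compact support in the chart, the defining identity $\langle D_{\alpha}u_{\alpha},\beta\rangle=\langle u_{\alpha},D_{\alpha}^{*}\beta\rangle$ becomes
\[
\int \sum_{j,k}g^{jk}\,\overline{\tfrac{i}{\hbar}\widehat{\zeta^{j}}u_{\alpha}}\,v_{k}\,dx=\int \overline{u_{\alpha}}\,(D_{\alpha}^{*}\beta)\,dx.
\]
Using $\overline{\widehat{\zeta^{j}}u_{\alpha}}=i\hbar\,\partial_{j}\bar{u}_{\alpha}+\eta_{j}\bar{u}_{\alpha}$ and integrating by parts the term containing $\partial_{j}\bar{u}_{\alpha}$ (no boundary terms appear since $\sqrt{|g|}=1$ and $u_{\alpha}$ is smooth), one moves the derivative off of $\bar{u}_{\alpha}$; the extra derivative lands on $g^{jk}v_{k}$, producing one term $-g^{jk}\partial_{j}v_{k}$ and another $-(\partial_{j}g^{jk})v_{k}$, together with the multiplicative term $-\frac{i}{\hbar}g^{jk}\eta_{j}v_{k}$. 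Repackaging these three contributions using $\widehat{\zeta^{j}}=-i\hbar\partial_{j}+\eta_{j}$ yields exactly
\[
D_{\alpha}^{*}\beta=-\frac{i}{\hbar}\sum_{j,k}\!\Bigl(g^{jk}\widehat{\zeta^{j}}-i\hbar(\partial_{j}g^{jk})\Bigr)v_{k},
\]
which is the claimed formula.

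The expression for $\Delta_{\alpha}=D_{\alpha}^{*}D_{\alpha}$ is then obtained by simple composition: applying $D_{\alpha}^{*}$ to the one-form $D_{\alpha}u_{\alpha}$, whose coefficients are $\frac{i}{\hbar}\widehat{\zeta^{k}}u_{\alpha}$, the two factors of $\pm\frac{i}{\hbar}$ combine into $\frac{1}{\hbar^{2}}$ and give
\[
\Delta_{\alpha}u_{\alpha}=\frac{1}{\hbar^{2}}\sum_{j,k}\!\Bigl(g^{jk}\widehat{\zeta^{j}}\widehat{\zeta^{k}}-i\hbar(\partial_{j}g^{jk})\widehat{\zeta^{k}}\Bigr)u_{\alpha}.
\]
The only minor obstacle is book-keeping around the volume forms: one must verify both that $\mu_{P}$ descends to Lebesgue measure in the chosen Darboux chart (so $\widehat{\zeta^{j}}$ is formally self-adjoint with respect to this measure) and that horizontal one-forms on $P$ pair in the $L^{2}(P,\Lambda^{1})$-inner product exactly as their representatives on $M$ pair through the dual metric $g^{jk}$. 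Both points follow directly from the construction of $g_{P}$ in Section \ref{sub:Compatible-metrics-and} and from the explicit formula for $\mu_{P}$, so the remaining calculation is the routine integration by parts outlined above.
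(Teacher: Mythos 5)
Your argument is correct and follows the same route as the paper's: integration by parts in the $L^2$ pairing to identify $D_{\alpha}^{*}$, then composition to obtain $\Delta_{\alpha}$. You usefully make explicit what the paper leaves implicit by simply writing $\int\cdots dx$ — namely that in Darboux coordinates a compatible metric has $\sqrt{|g|}=1$, which is exactly what suppresses the $\partial_{j}\log\sqrt{|g|}$ correction that would otherwise appear in $D_{\alpha}^{*}$; one small slip is that your parenthetical attributes the absence of boundary terms to $\sqrt{|g|}=1$, whereas boundary terms vanish because $v$ has compact support, the constant volume density being responsible only for killing the interior correction term.
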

\end{framed}
\begin{proof}
The expression (\ref{eq:expression_of_D-1}) gives (\ref{eq:expression_D_a}).
Let $v=\sum_{j=1}^{2d}v_{j}dx^{j}\in C^{\infty}\left(U_{\alpha},\Lambda^{1}\right)$
and $u\in C^{\infty}\left(U_{\alpha}\right)$. Using integration by
parts, we have 
\begin{eqnarray*}
\left\langle u,D_{\alpha}^{*}v\right\rangle _{L^{2}\left(U_{\alpha}\right)} & = & \left\langle D_{\alpha}u,v\right\rangle _{L^{2}\left(U_{\alpha},\Lambda^{1}\right)}=\int\langle du+\frac{i}{\hbar}\eta u|v\rangle dx\\
 & = & \sum_{j,k}\int\overline{\left(\partial_{j}u+\frac{i}{\hbar}\eta_{j}u\right)}g^{jk}v_{k}dx=-\sum_{j,k}\int\left(\overline{u}\partial_{j}\left(g^{jk}v_{k}\right)+\frac{i}{\hbar}\eta_{j}\overline{u}g^{jk}v_{k}\right)dx\\
 & = & -\int\overline{u}\sum_{j,k}\left(\left(\partial_{j}g^{jk}\right)v_{k}+g^{jk}\left(\partial_{j}v_{k}+\frac{i}{\hbar}\eta_{j}v_{k}\right)\right)dx.
\end{eqnarray*}
Hence
\[
D_{\alpha}^{*}v=-\frac{i}{\hbar}\sum_{j,k}\left(g^{jk}\widehat{\zeta^{j}}-i\hbar\left(\partial_{j}g^{jk}\right)\right)v_{k}.
\]
We deduce (\ref{eq:Delta_local_coordinates}) from the computation
\begin{eqnarray*}
\Delta_{\alpha}u_{\alpha} & = & D_{\alpha}^{*}D_{\alpha}u_{\alpha}=-\frac{i}{\hbar}\sum_{j,k}\left(g^{jk}\widehat{\zeta^{j}}-i\hbar\left(\partial_{j}g^{jk}\right)\right)\frac{i}{\hbar}\left(\widehat{\zeta^{k}}u_{\alpha}\right)\\
 & = & \frac{1}{\hbar^{2}}\sum_{j,k}\left(g^{jk}\widehat{\zeta^{j}}\widehat{\zeta^{k}}-i\hbar\left(\partial_{j}g^{jk}\right)\widehat{\zeta^{k}}\right)u_{\alpha}
\end{eqnarray*}

\end{proof}
\begin{framed}%
\begin{cor}
In local Darboux coordinates $x=\left(q,p\right)=\left(q^{1},\ldots q^{d},p^{1},\ldots p^{d}\right)$
on $U_{\alpha}$ and in the special case of the Euclidean metric $g=\sum_{j=1}^{d}dq^{j}\otimes dq^{j}+dp^{j}\otimes dp^{j}$,
we have
\begin{equation}
\Delta_{\alpha}=\frac{1}{\hbar^{2}}\left(\sum_{j=1}^{d}\widehat{\zeta_{p}^{j}}^{2}+\widehat{\zeta_{q}^{j}}^{2}\right)\label{eq:Lap_a}
\end{equation}
with $\widehat{\zeta_{q}^{j}},\widehat{\zeta_{p}^{j}}$ given in (\ref{eq:def_zeta_hat-1}).\end{cor}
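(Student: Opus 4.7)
The plan is to specialize the general local formula \eqref{eq:Delta_local_coordinates} from Proposition \ref{prop:expression_of_D*_Delta} to the Euclidean metric on the Darboux chart. With $g=\sum_{j=1}^{d}\bigl(dq^{j}\otimes dq^{j}+dp^{j}\otimes dp^{j}\bigr)$ one has $g^{jk}=\delta^{jk}$ in the coordinates $x=(q^{1},\dots,q^{d},p^{1},\dots,p^{d})$, and in particular $\partial_{j}g^{jk}=0$ for all $j,k$. Thus the second (first-order) term in \eqref{eq:Delta_local_coordinates} drops out and the formula collapses to
\[
\Delta_{\alpha}=\frac{1}{\hbar^{2}}\sum_{j=1}^{2d}\bigl(\widehat{\zeta^{j}}\bigr)^{2}.
\]

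Next I would match the indexing: the generic coordinate $x^{j}$ of Proposition \ref{prop:expression_of_D*_Delta} runs over the $2d$ entries, while Proposition \ref{prop:Normal-coordinates.} splits these into $q^{1},\dots,q^{d},p^{1},\dots,p^{d}$ with the associated operators $\widehat{\zeta_{q}^{j}}$ and $\widehat{\zeta_{p}^{j}}$ defined in \eqref{eq:def_zeta_hat-1}. In those coordinates the connection one-form is \eqref{eq:eta_euclidian_model-1}, so $\eta_{q^{j}}=-\tfrac{1}{2}p^{j}$ and $\eta_{p^{j}}=\tfrac{1}{2}q^{j}$, which is exactly the shift appearing in \eqref{eq:def_zeta_hat-1}. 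Hence the operators $\widehat{\zeta^{j}}$ in the proposition are identified respectively with $\widehat{\zeta_{q}^{j}}$ and $\widehat{\zeta_{p}^{j}}$, and the sum above becomes
\[
\Delta_{\alpha}=\frac{1}{\hbar^{2}}\sum_{j=1}^{d}\Bigl(\widehat{\zeta_{q}^{j}}^{\,2}+\widehat{\zeta_{p}^{j}}^{\,2}\Bigr),
\]
which is the claimed formula.

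Since every step is a direct substitution into a formula already established in Proposition \ref{prop:expression_of_D*_Delta}, there is no genuine obstacle; the only point to verify carefully is the compatibility between the generic indexing used in the general expression \eqref{eq:Delta_local_coordinates} and the $(q,p)$-splitting of Proposition \ref{prop:Normal-coordinates.}, in particular that the local components $\eta_{j}$ of the connection one-form match the half-coordinate shifts in the definition of $\widehat{\zeta_{q}^{j}}$ and $\widehat{\zeta_{p}^{j}}$. Once this bookkeeping is done, the corollary follows in one line.
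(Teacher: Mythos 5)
Your proof is correct and is exactly the (implicit) argument of the paper: the corollary is stated immediately after Proposition~\ref{prop:expression_of_D*_Delta} with no separate proof, because it is precisely the substitution $g^{jk}=\delta^{jk}$ (hence $\partial_j g^{jk}=0$) into \eqref{eq:Delta_local_coordinates}, followed by the identification of the $\widehat{\zeta^{j}}$ with $\widehat{\zeta_q^j}$, $\widehat{\zeta_p^j}$ via the connection one-form \eqref{eq:eta_euclidian_model-1}. Your bookkeeping check of the $\eta_j$ components against \eqref{eq:def_zeta_hat-1} is the only substantive verification required, and you carried it out correctly.
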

\end{framed}

The operator (\ref{eq:Lap_a}) is called the \emph{Euclidean rough
Laplacian}. In Section \ref{sub:The-rough-Laplacian_4.6}, we will
deduce the cluster structure (or the Landau levels) of the spectrum
of the Euclidean rough Laplacian by identifying it with the \emph{harmonic
oscillator.}

\section{\label{sec:Resonances-of-linear_exp_3}Resonances of linear expanding
maps}

This Section is self-contained. The main result of this Section is
Proposition \ref{pp:structure_of_hLA}.

\subsection{The Bargmann transform\label{sub:The-Bargmann-transform}}

\subsubsection{Definitions}

In this section, we recall some basic facts related to the so-called
Bargmann transform. For more detailed account about the Bargmann transform,
we refer to the books \cite[Ch.3]{martinez-01}, \cite[p.39]{folland-88},
\cite{hall99}, \cite[p.19]{perelomov1}.

Let $D$ be a positive integer. Let $\hbar>0$. We consider the Euclidean
space $\mathbb{R}^{D}$ with its canonical Euclidean norm written
$\left|.\right|$. For each point $(x,\xi)\in T^{*}\real^{D}=\real^{D}\oplus\real^{D}$,
we assign the complex-valued smooth function $\phi_{x,\xi}\in\mathcal{S}\left(\real^{D}\right)$
defined by 
\begin{equation}
\phi_{x,\xi}(y)=a_{D}\exp\left(\frac{i}{\hbar}\xi\cdot(y-\frac{x}{2})-\frac{1}{2\hbar}|y-x|^{2}\right),\quad y\in\mathbb{R}^{D}\label{eq:def_of_phi}
\end{equation}
with 
\begin{equation}
a_{D}=(\pi\hbar)^{-D/4}.\label{eq:def_of_ah}
\end{equation}

We will henceforth consider the measure $dx=dx^{1}\ldots dx^{D}$
on $\mathbb{R}^{D}$ defining the Hilbert spaces $L^{2}\left(\mathbb{R}^{D}\right)$
and the measure $\left(2\pi\hbar\right)^{-D}dxd\xi$ on $T^{*}\real^{D}=\mathbb{R}^{2D}$
defining $L^{2}\left(\mathbb{R}^{2D}\right)$. Notice that the constant
$a_{D}$ is taken so that $\left\Vert \phi_{x,\xi}\right\Vert _{L^{2}\left(\mathbb{R}^{D}\right)}=1$.
\begin{defn}
\label{def:Bargmann-transform}The \emph{Bargmann transform} is the
continuous operator 
\begin{equation}
\Bargmann_{\hbar}:\mathcal{S}(\real^{D})\to\mathcal{S}(\real^{2D}),\qquad\left(\Bargmann_{\hbar}u\right)(x,\xi)=\int\overline{\phi_{x,\xi}(y)}\cdot u(y)dy\label{eq:def_Bargman_transform}
\end{equation}
on the Schwartz space $\mathcal{S}(\real^{D}).$ Then the (formal)
adjoint of $\Bargmann_{\hbar}$ defined by $\left\langle u,\Bargmann_{\hbar}^{*}v\right\rangle =\left\langle \Bargmann_{\hbar}u,v\right\rangle $
is given by 
\begin{equation}
\Bargmann_{\hbar}^{*}:\mathcal{S}(\real^{2D})\to\mathcal{S}(\real^{D}),\qquad\left(\Bargmann_{\hbar}^{*}v\right)(y)=\int\phi_{x,\xi}(y)\cdot v(x,\xi)\frac{dxd\xi}{\left(2\pi\hbar\right)^{D}}.\label{eq:B*}
\end{equation}
\end{defn}
\begin{lem}
\label{lm:Bargmann_is_isometry}\cite[p70, Proposition 3.1.1]{martinez-01}
We have that
\begin{enumerate}
\item $\Bargmann_{\hbar}$ extends uniquely to an isometric embedding $\Bargmann_{\hbar}:L^{2}(\real^{D})\to L^{2}(\real^{2D})$.
\item $\Bargmann_{\hbar}^{*}$ extends uniquely to a bounded operator $\Bargmann_{\hbar}^{*}:L^{2}(\real^{2D})\to L^{2}(\real^{D})$. 
\item $\Bargmann_{\hbar}^{*}\circ\Bargmann_{\hbar}=\mathrm{{Id}}$ on $L^{2}(\real^{D})$.
\end{enumerate}
\end{lem}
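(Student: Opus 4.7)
The plan is to establish item (3) first by a direct kernel computation, from which items (1) and (2) follow by standard abstract arguments. Concretely, for $u\in\mathcal{S}(\real^{D})$, I would write
\[
(\Bargmann_{\hbar}^{*}\Bargmann_{\hbar}u)(y)=\int K_{\hbar}(y,z)u(z)\,dz,\qquad K_{\hbar}(y,z)=\int \phi_{x,\xi}(y)\overline{\phi_{x,\xi}(z)}\,\frac{dxd\xi}{(2\pi\hbar)^{D}},
\]
and compute this kernel explicitly. Substituting the definition \eqref{eq:def_of_phi} gives
\[
K_{\hbar}(y,z)=a_{D}^{2}\int e^{\frac{i}{\hbar}\xi\cdot(y-z)}\cdot e^{-\frac{1}{2\hbar}(|y-x|^{2}+|z-x|^{2})}\,\frac{dxd\xi}{(2\pi\hbar)^{D}}.
\]
The $\xi$-integral, interpreted as an oscillatory integral, equals $\delta(y-z)$ by the standard Fourier inversion formula $\int e^{i\xi\cdot(y-z)/\hbar}\,d\xi/(2\pi\hbar)^{D}=\delta(y-z)$. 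Then the remaining $x$-integral is a Gaussian that, upon setting $y=z$, evaluates to $\int e^{-|y-x|^{2}/\hbar}dx=(\pi\hbar)^{D/2}$. Combined with $a_{D}^{2}=(\pi\hbar)^{-D/2}$ from \eqref{eq:def_of_ah}, this yields $K_{\hbar}(y,z)=\delta(y-z)$, i.e.\ $\Bargmann_{\hbar}^{*}\Bargmann_{\hbar}=\mathrm{Id}$ on $\mathcal{S}(\real^{D})$.

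Given (3), item (1) is immediate: for $u\in\mathcal{S}(\real^{D})$,
\[
\|\Bargmann_{\hbar}u\|_{L^{2}(\real^{2D})}^{2}=\langle\Bargmann_{\hbar}u,\Bargmann_{\hbar}u\rangle=\langle\Bargmann_{\hbar}^{*}\Bargmann_{\hbar}u,u\rangle=\|u\|_{L^{2}(\real^{D})}^{2},
\]
so $\Bargmann_{\hbar}$ is isometric on the dense subspace $\mathcal{S}(\real^{D})\subset L^{2}(\real^{D})$ and therefore extends uniquely to an isometric embedding on $L^{2}(\real^{D})$. Item (2) then follows from (1) by duality: the bounded operator $\Bargmann_{\hbar}:L^{2}(\real^{D})\to L^{2}(\real^{2D})$ has a unique bounded adjoint, and a short direct check (swap integration orders, using Fubini, justified by the rapid decay of the Gaussians in $\phi_{x,\xi}$) shows that this Hilbert-space adjoint agrees on $\mathcal{S}(\real^{2D})$ with the integral formula \eqref{eq:B*}; so the operator defined by \eqref{eq:B*} extends uniquely to a bounded map $\Bargmann_{\hbar}^{*}:L^{2}(\real^{2D})\to L^{2}(\real^{D})$ with $\|\Bargmann_{\hbar}^{*}\|\le 1$, and relation (3) extends by continuity to all of $L^{2}(\real^{D})$.

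The only genuinely non-formal step is the kernel computation itself, and the main technical point is justifying the reduction of the $\xi$-integral to a delta distribution. This can be made rigorous either by interpreting the calculation in the sense of tempered distributions (pairing against $u\otimes\bar v$ with $u,v\in\mathcal{S}(\real^{D})$ and applying Fubini in the order $\xi$, then $x$, then $y,z$), or by inserting a Gaussian regularizer $e^{-\epsilon|\xi|^{2}}$ and passing to the limit $\epsilon\to 0^{+}$. Both approaches are routine; no further obstacle arises.
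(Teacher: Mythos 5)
Your proof is correct and uses essentially the same computation as the paper: the paper computes $\|\Bargmann_{\hbar}u\|_{L^{2}(\real^{2D})}^{2}$ directly (same $\xi$-integral producing $\delta(y-y')$, same Gaussian $x$-integral giving $(\pi\hbar)^{D/2}$) to establish the isometry first and then dismisses (2) and (3) as ``the usual continuity argument,'' whereas you establish (3) first via the kernel and derive (1) and (2) from it. The underlying calculation is identical, and your extra remarks on making the delta-distribution step rigorous are a small improvement in explicitness over the paper's terse proof.
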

\begin{proof}
For any $u\in\mathcal{S}(\real^{D})$, we have 
\begin{align*}
\|\mathcal{B}_{\hbar}u\|_{L^{2}\left(\mathbb{R}^{2D}\right)}^{2} & =\frac{a_{D}^{2}}{\left(2\pi\hbar\right)^{D}}\int\phi_{x,\xi}(y')\overline{u(y')}\cdot\overline{\phi_{x,\xi}(y)}u(y)dxd\xi dydy'\\
 & =\frac{a_{D}^{2}}{\left(2\pi\hbar\right)^{D}}\int u(y)\overline{u(y')}\exp\left(\frac{i}{\hbar}\xi(y'-y)-\frac{1}{2\hbar}\left(|x-y|^{2}+|x-y'|^{2}\right)\right)dxd\xi dydy'\\
 & =(\pi\hbar)^{-D/2}\int|u(y)|^{2}\exp(-|x-y|^{2}/\hbar)dxdy\\
 & =\int|u(y)|^{2}dy=\|u\|_{L^{2}\left(\mathbb{R}^{D}\right)}^{2}.
\end{align*}
This gives the claims of the lemma by the usual continuity argument.
\end{proof}

\subsubsection{Bargmann projector}
\begin{prop}
\label{prop:The-Bargmann-projector}The space $L^{2}(\real^{2D})$
is orthogonally decomposed as 
\begin{equation}
L^{2}(\real^{2D})=\mbox{Im}\Bargmann_{\hbar}\oplus\ker\Bargmann_{\hbar}^{*}.\label{eq:decomp_L2R2D}
\end{equation}
The \emph{Bargmann projector} $\BargmannP_{\hbar}$ is the orthogonal
projection onto $\mbox{\ensuremath{\mathrm{Im}}\,}\Bargmann_{\hbar}\subset L^{2}(\real^{2D})$
along $\ker\Bargmann_{\hbar}^{*}$. It is given by 
\begin{equation}
\BargmannP_{\hbar}:=\Bargmann_{\hbar}\circ\Bargmann_{\hbar}^{*}\quad:L^{2}(\real^{2D})\to L^{2}(\real^{2D}).\label{eq:def_Bargmann_projector}
\end{equation}
It can be expressed as an integral operator $\left(\BargmannP_{\hbar}v\right)(z)=\int K_{\mathcal{P},\hbar}\left(z,z'\right)v(z')dz'$
with the Schwartz kernel:
\begin{equation}
K_{\mathcal{P},\hbar}\left(z,z'\right)=\exp\left(\frac{i}{2\hbar}\omega(z,z')-\frac{1}{4\hbar}|z-z'|^{2}\right)\label{eq:Bargman_Kernel}
\end{equation}
with $z=\left(x,\xi\right),\; z'=\left(x',\xi'\right)\in\mathbb{R}^{2D}$,
the measure $dz'=dx'd\xi'/\left(2\pi\hbar\right)^{D}$, the Euclidean
norm $|z|^{2}:=|x|^{2}+|\xi|^{2}$ and the canonical symplectic form
on $T^{*}\mathbb{R}^{D}$, $\omega\left(z,z'\right)=x\cdot\xi'-\xi\cdot x'$\end{prop}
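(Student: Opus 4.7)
The plan is to prove the three assertions in sequence, deducing them all from the isometric property of $\mathcal{B}_\hbar$ (Lemma \ref{lm:Bargmann_is_isometry}) and a single Gaussian integral computation.

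First I would establish the orthogonal decomposition \eqref{eq:decomp_L2R2D}. Since $\mathcal{B}_\hbar:L^2(\mathbb{R}^D)\to L^2(\mathbb{R}^{2D})$ is an isometry, its image $\mathrm{Im}\,\mathcal{B}_\hbar$ is a closed subspace of $L^2(\mathbb{R}^{2D})$. The orthogonal complement of this closed subspace is, by the standard characterization of adjoints, $(\mathrm{Im}\,\mathcal{B}_\hbar)^\perp=\ker\mathcal{B}_\hbar^*$, which gives the orthogonal decomposition.

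Next I would verify that $\BargmannP_\hbar:=\mathcal{B}_\hbar\mathcal{B}_\hbar^*$ is the orthogonal projector onto $\mathrm{Im}\,\mathcal{B}_\hbar$. The operator is self-adjoint since $(\mathcal{B}_\hbar\mathcal{B}_\hbar^*)^*=\mathcal{B}_\hbar\mathcal{B}_\hbar^*$. Using $\mathcal{B}_\hbar^*\mathcal{B}_\hbar=\mathrm{Id}$ from Lemma \ref{lm:Bargmann_is_isometry}(3), one gets $\BargmannP_\hbar^2=\mathcal{B}_\hbar(\mathcal{B}_\hbar^*\mathcal{B}_\hbar)\mathcal{B}_\hbar^*=\mathcal{B}_\hbar\mathcal{B}_\hbar^*=\BargmannP_\hbar$. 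Finally, clearly $\mathrm{Im}\,\BargmannP_\hbar\subset\mathrm{Im}\,\mathcal{B}_\hbar$, and conversely $\BargmannP_\hbar(\mathcal{B}_\hbar u)=\mathcal{B}_\hbar(\mathcal{B}_\hbar^*\mathcal{B}_\hbar u)=\mathcal{B}_\hbar u$ for every $u\in L^2(\mathbb{R}^D)$, so $\BargmannP_\hbar$ fixes $\mathrm{Im}\,\mathcal{B}_\hbar$ pointwise.

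The main work is the kernel computation \eqref{eq:Bargman_Kernel}. Composing the integral representations \eqref{eq:def_Bargman_transform} and \eqref{eq:B*}, one reads off the kernel
\[
K_{\mathcal{P},\hbar}(z,z')=\langle\phi_{z'},\phi_z\rangle_{L^2(\mathbb{R}^D)}=a_D^2\int_{\mathbb{R}^D}\exp\!\left(\tfrac{i}{\hbar}(\xi'-\xi)\!\cdot\!y-\tfrac{i}{2\hbar}(\xi'\!\cdot\!x'-\xi\!\cdot\!x)-\tfrac{1}{2\hbar}(|y-x|^2+|y-x'|^2)\right)dy
\]
for $z=(x,\xi)$, $z'=(x',\xi')$. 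The step I expect to be the bookkeeping bottleneck is reducing the exponent to a clean Gaussian plus the symplectic phase: I would use the identity $|y-x|^2+|y-x'|^2=2|y-\tfrac{x+x'}{2}|^2+\tfrac{1}{2}|x-x'|^2$, then change variable $w=y-\tfrac{x+x'}{2}$, and verify that the non-quadratic linear-in-$x,x'$ remainder combines to exactly $\tfrac{i}{2\hbar}\omega(z,z')$ with $\omega(z,z')=x\!\cdot\!\xi'-\xi\!\cdot\!x'$. Completing the square in $w$ produces a shift by $c=i(\xi'-\xi)/2$ and an extra factor $\exp(-|\xi'-\xi|^2/(4\hbar))$; the Gaussian integral in $w$ contributes $(\pi\hbar)^{D/2}$, which together with $a_D^2=(\pi\hbar)^{-D/2}$ cancels exactly. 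Assembling the $|x-x'|^2$ and $|\xi-\xi'|^2$ contributions into $|z-z'|^2$ yields the stated formula \eqref{eq:Bargman_Kernel}.
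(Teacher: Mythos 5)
Your proposal is correct and follows essentially the same route as the paper: projector properties from $\mathcal{B}_\hbar^*\mathcal{B}_\hbar=\mathrm{Id}$ and self-adjointness, followed by the Gaussian integral for the kernel. The only cosmetic difference is that the paper plugs directly into the general Gaussian formula (\ref{eq:Gaussian_integrale}) with a complex linear term, whereas you first center the quadratic form via $|y-x|^2+|y-x'|^2=2|y-\tfrac{x+x'}{2}|^2+\tfrac{1}{2}|x-x'|^2$ before completing the square; the bookkeeping works out identically in both cases.
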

\begin{proof}
$\BargmannP_{\hbar}$ is an orthogonal projection because $\mathcal{P}_{h}^{*}=\left(\Bargmann_{\hbar}\circ\Bargmann_{\hbar}^{*}\right)^{*}=\mathcal{P}_{\hbar}$
and $\mathcal{P}_{\hbar}^{2}=\Bargmann_{\hbar}\circ\Bargmann_{\hbar}^{*}\circ\Bargmann_{\hbar}\circ\Bargmann_{\hbar}^{*}=\mathcal{P}_{\hbar}$
from Lemma \ref{lm:Bargmann_is_isometry}. From Definition \ref{def:Bargmann-transform},
the Schwartz kernel of $\mathcal{P}_{\hbar}=\Bargmann_{\hbar}\circ\Bargmann_{\hbar}^{*}$
is
\begin{eqnarray*}
K_{\mathcal{P},\hbar}\left(z,z'\right) & = & \int dy\overline{\phi_{z}(y)}\phi_{z'}(y)\\
 & = & a_{D}^{2}\int_{\mathbb{R}^{D}}dy\exp\left(-i\frac{\xi}{\hbar}(y-\frac{x}{2})+i\frac{\xi'}{\hbar}(y-\frac{x'}{2})-\frac{1}{2\hbar}\left(|y-x|^{2}+|y-x'|^{2}\right)\right)\\
 & = & a_{D}^{2}\exp\left(i\frac{1}{2\hbar}(\xi x-\xi'x')-\frac{1}{2\hbar}\left(|x|^{2}+|x'|^{2}\right)\right)\\
 &  & \times\int dy\exp\left(\frac{1}{\hbar}\left(i\left(\xi'-\xi\right)+\left(x+x'\right)\right)y-\frac{1}{\hbar}|y|^{2}\right).
\end{eqnarray*}
We use the following formula for Gaussian integral in $\mathbb{R}^{D}$:
\begin{equation}
\int_{\mathbb{R}^{D}}e^{-\frac{1}{2}y\cdot\left(Ay\right)+b.y}dy=\sqrt{\frac{\left(2\pi\right)^{D}}{\mbox{det}A}}\exp\left(\frac{1}{2}b\cdot\left(A^{-1}b\right)\right),\qquad b\in\mathbb{C}^{D},A\in\mathcal{L}\left(\mathbb{R}^{D}\right)\label{eq:Gaussian_integrale}
\end{equation}
with $A=(2/\hbar)\cdot\mathrm{Id}$, $b=\frac{i}{\hbar}\left(\xi'-\xi\right)+\frac{1}{\hbar}\left(x+x'\right)$
and get
\begin{eqnarray*}
K_{\mathcal{P},\hbar}\left(z,z'\right) & = & \frac{1}{\left(\pi\hbar\right)^{D/2}}\frac{\left(2\pi\right)^{D/2}}{\left(2/\hbar\right)^{D/2}}\exp\left(\frac{1}{4\hbar}\left(i\left(\xi'-\xi\right)+\left(x+x'\right)\right)^{2}\right)\\
 &  & \quad\times\exp\left(i\frac{1}{2\hbar}(\xi x-\xi'x')-\frac{1}{2\hbar}\left(|x|^{2}+|x'|^{2}\right)\right)\\
 & = & \exp\left(\frac{1}{4\hbar}\left(-|\xi'-\xi|^{2}-|x-x'|^{2}\right)+\frac{i}{2\hbar}(\xi x-\xi'x'+\left(\xi'-\xi\right)\left(x+x'\right))\right)\\
 & = & \exp\left(\frac{i}{2\hbar}\omega(z,z')-\frac{1}{4\hbar}|z-z'|^{2}\right).
\end{eqnarray*}

\end{proof}

\subsubsection{The Bargmann transform in more general setting\label{sub:Bargmann_transform_generalized}}

We have seen that the Bargmann transform gives a phase-space representation,
i.e. a unitary isomorphism: $\Bargmann_{\hbar}:L^{2}(\real^{D})\to\mbox{Im}\left(\Bargmann_{\hbar}\right)\subset L^{2}(T^{*}\real^{D})$.
The next proposition gives the Bargmann transform in a slightly more
general setting. We start from a symplectic linear space $(E,\omega)$
of dimension $2D$ and a Lagrangian subspace $L\subset E$. Let $g(\cdot,\cdot)$
be a scalar product on $E$ that is compatible with the symplectic
form $\omega$ on $E$ in the sense that there is a linear map $J:E\to E$
such that $J\circ J=-\mathrm{Id}$ and holds 
\[
g\left(z,z'\right)=\omega\left(z,Jz'\right)\quad\mbox{for all }z,z'\in E.
\]
This is nothing but the point-wise version of the condition (\ref{eq:def_g_compatible}).
Let $L^{\perp_{g}}$ be the orthogonal complement of $L$ with respect
to the inner product $g$, so that a point $z\in E$ can be decomposed
uniquely as $z=x+\xi$, $x\in L$, $\xi\in L^{\perp_{g}}$. 

For each point $z=x+\xi$, $x\in L$, $\xi\in L^{\perp_{g}}$, we
define the wave packet $\phi_{z}\left(y\right)\in\mathcal{S}\left(L\right)$
by
\[
\phi_{z}(y)=a_{D}\exp\left(\frac{i}{\hbar}\omega(y-\frac{x}{2},\xi)-\frac{1}{2\hbar}|y-x|_{g}^{2}\right)\quad\mbox{for }y\in L.
\]
We define the Bargmann transform $\mathcal{B}_{\hbar}:\mathcal{S}\left(L\right)\rightarrow\mathcal{S}\left(E\right)$
and its adjoint $\mathcal{B}_{\hbar}^{*}:\mathcal{S}\left(E\right)\rightarrow\mathcal{S}\left(L\right)$
as in Definition \ref{def:Bargmann-transform}. Then the statement
corresponding to Lemma \ref{lm:Bargmann_is_isometry} and Proposition
\ref{prop:The-Bargmann-projector} holds. Namely 
\begin{prop}
For the Bargmann transform $\Bargmann_{\hbar}$ and its adjoint defined
as above, 
\begin{enumerate}
\item $\Bargmann_{\hbar}$ extends uniquely to an isometric embedding $\Bargmann_{\hbar}:L^{2}(L)\to L^{2}(E)$.
\item $\Bargmann_{\hbar}^{*}$ extends uniquely to a bounded operator $\Bargmann_{\hbar}^{*}:L^{2}(E)\to L^{2}(L)$. 
\item $\Bargmann_{\hbar}^{*}\circ\Bargmann_{\hbar}=\mathrm{{Id}}$ on $L^{2}(L)$.
\item The Bargmann projector \textup{$\mathcal{P}_{\hbar}:L^{2}\left(E\right)\rightarrow L^{2}\left(E\right)$,
defined by $\mathcal{P}_{\hbar}:=\Bargmann_{\hbar}\circ\Bargmann_{\hbar}^{*}$,
is the orthogonal projection onto $\mathrm{Im\,}\mathcal{B}_{\hbar}\subset L^{2}(E)$.
It is expressed as an integral operator with kernel (\ref{eq:Bargman_Kernel})
with $|\cdot|$ replaced by $|\cdot|_{g}$. }
\end{enumerate}
\end{prop}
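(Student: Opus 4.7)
The plan is to reduce the proposition to the standard case already proved in Lemma \ref{lm:Bargmann_is_isometry} and Proposition \ref{prop:The-Bargmann-projector} by exhibiting a linear isomorphism $E \cong \mathbb{R}^{2D}$ that is \emph{simultaneously} symplectic for $\omega$ and isometric for $g$, and that carries $L$ to the horizontal Lagrangian $\mathbb{R}^{D}\times\{0\}$. Once such an isomorphism is in hand, the wave packet $\phi_{z}$ in the general setting will be pulled back exactly to the wave packet (\ref{eq:def_of_phi}) used in the standard case, and hence the two versions of $\mathcal{B}_{\hbar}$ are intertwined by the induced unitary on $L^{2}$'s.

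The first step is to observe that, because $L$ is Lagrangian and $J$ is compatible, one has $JL = L^{\perp_{g}}$. Indeed, for $x,y\in L$, $g(Jx,y)=\omega(Jx,Jy)=\omega(x,y)=0$ (using that $J$ preserves $\omega$, a standard consequence of the compatibility), and a dimension count gives equality. Next, pick any $g$-orthonormal basis $(e_{1},\dots,e_{D})$ of $L$ and set $f_{i}:=Je_{i}$. Then $(e_{1},\dots,e_{D},f_{1},\dots,f_{D})$ is a $g$-orthonormal basis of $E$ (since $g(Jv,Jw)=g(v,w)$), and a short computation using $g(v,w)=\omega(v,Jw)$ together with $J^{2}=-\mathrm{Id}$ gives
\[
\omega(e_{i},e_{j})=0,\qquad \omega(f_{i},f_{j})=0,\qquad \omega(e_{i},f_{j})=\delta_{ij}.
\]
Thus the linear map $\Psi:E\to\mathbb{R}^{2D}$ sending $\sum x^{i}e_{i}+\sum\xi^{i}f_{i}\mapsto(x,\xi)$ transports $\omega$ into the canonical symplectic form, $g$ into the standard Euclidean inner product, and $L$ into $\mathbb{R}^{D}\times\{0\}$.

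Under $\Psi$, the decomposition $z=x+\xi$ with $x\in L$, $\xi\in L^{\perp_{g}}$ goes to the coordinate splitting in $\mathbb{R}^{2D}$, and for $y\in L$ the general wave packet reads
\[
\phi_{z}(y)=a_{D}\exp\!\Bigl(\tfrac{i}{\hbar}\omega(y-\tfrac{x}{2},\xi)-\tfrac{1}{2\hbar}|y-x|_{g}^{2}\Bigr)
=a_{D}\exp\!\Bigl(\tfrac{i}{\hbar}\xi\cdot(y-\tfrac{x}{2})-\tfrac{1}{2\hbar}|y-x|^{2}\Bigr),
\]
which is precisely (\ref{eq:def_of_phi}). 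Consequently, if $U_{L}:L^{2}(L)\to L^{2}(\mathbb{R}^{D})$ and $U_{E}:L^{2}(E)\to L^{2}(\mathbb{R}^{2D})$ denote the unitary maps induced by $\Psi$ together with the normalisations built into the measures $dy$ and $dx\,d\xi/(2\pi\hbar)^{D}$, then $U_{E}\circ\mathcal{B}_{\hbar}^{\mathrm{gen}}=\mathcal{B}_{\hbar}^{\mathrm{std}}\circ U_{L}$ and likewise for the adjoints.

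From this intertwining, claims (1), (2), (3) and the orthogonal-projection statement in (4) are immediate consequences of Lemma \ref{lm:Bargmann_is_isometry} and Proposition \ref{prop:The-Bargmann-projector}. The explicit kernel in (4) is obtained by pulling back (\ref{eq:Bargman_Kernel}): since both $\omega(z,z')$ and $|z-z'|_{g}^{2}$ are intrinsic (defined by $\omega$ and $g$, not by a choice of basis), the pulled-back kernel automatically takes the stated intrinsic form. The only mildly delicate point, and the one I would check carefully, is the verification of the three bracket relations $\omega(e_{i},e_{j})=\omega(f_{i},f_{j})=0$ and $\omega(e_{i},f_{j})=\delta_{ij}$, because these underlie the fact that $\Psi$ is simultaneously symplectic and isometric; everything else is a transparent transport through a unitary change of coordinates.
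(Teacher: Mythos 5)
Your proof is correct and follows essentially the same route as the paper: reduce to the standard case $\mathbb{R}^D\subset\mathbb{R}^{2D}$ by exhibiting a linear isomorphism $\Psi:\mathbb{R}^{2D}\to E$ that is simultaneously symplectic, isometric, and compatible with the two Lagrangian splittings, so that the general wave packet pulls back to the one in (\ref{eq:def_of_phi}) and all four claims become restatements of Lemma \ref{lm:Bargmann_is_isometry} and Proposition \ref{prop:The-Bargmann-projector}. The paper merely asserts the existence of such $\Psi$ without constructing it; you supply the construction explicitly via a $g$-orthonormal basis $(e_i)$ of $L$ and $f_i := Je_i$, and your verifications (that $JL=L^{\perp_g}$, that $g(Jv,Jw)=g(v,w)$, and the three symplectic bracket relations) all check out — in particular, the fact that $J$ preserves $\omega$ indeed follows from the symmetry of $g$ together with $J^2=-\mathrm{Id}$, so no extra hypothesis is being smuggled in. This is a welcome filling-in of detail rather than a different proof.
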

\begin{rem}
Notice that the Bargmann projector $\mathcal{P}_{\hbar}:L^{2}\left(E\right)\rightarrow L^{2}\left(E\right)$
can be defined directly from its kernel (\ref{eq:Bargman_Kernel})
and is independent on the choice of the Lagrangian subspace $L$.\end{rem}
\begin{proof}
There are linear isomorphisms $\psi:\real^{D}\to L$ and $\Psi:\real^{2D}\to E$
such that the following diagram commutes:
\[
\begin{CD}\real^{2D}@>\Psi>>E\\
@VpVV@Vp'VV\\
\real^{D}@>\psi>>L
\end{CD}
\]
where $p:\real^{2D}=\real^{D}\oplus\real^{D}\to\real^{D}$ is the
projection to the first $D$ components: $p(x,\xi)=x$, and $p':E\to L$
is the orthogonal projection to $L$ with respect to $g,$ and moreover
that the pull-back of the symplectic form $\omega$ and the inner
product $g$ by $\Psi$ coincides with the standard Euclidean inner
product $g_{0}\left(z,z'\right)=z\cdot z'$ and the standard symplectic
form:
\[
\omega_{0}\left(z,z'\right)=x\cdot\xi'-\xi\cdot x'\quad\mbox{for }z=\left(x,\xi\right),z'=\left(x',\xi'\right)\in T^{*}\mathbb{R}^{D}
\]
Through such correspondence by $\Psi$ and $\psi$, the definition
of the Bargmann transform and its adjoint above coincides with those
that we made in the last subsection. Therefore the claims are just
restatement of Lemma \ref{lm:Bargmann_is_isometry} and Proposition
\ref{prop:The-Bargmann-projector}.

\end{proof}

\subsubsection{Scaling\label{sub:Scaling}}

The operators $\Bargmann_{\hbar}$ and $\Bargmann_{\hbar}^{*}$ are
related to $\Bargmann_{1}$ and $\Bargmann_{1}^{*}$ (i.e. with $\hbar=1$)
by the simple scaling $x\mapsto\hbar^{1/2}x$. Though this fact should
be obvious, we give the relations explicitly for the later use. Let
us introduce the unitary operators 
\begin{equation}
s_{\hbar}:L^{2}(\real^{D})\to L^{2}(\real^{D}),\qquad s_{\hbar}u(x)=\hbar^{-D/4}u(\hbar^{-1/2}x)\label{eq:I_hbar}
\end{equation}
 and%
\footnote{Recall the convention on the norm on $L^{2}(\real^{D})$ and $L^{2}(\real^{2D})$
made in the beginning of this section.%
} 
\begin{equation}
S_{\hbar}:L^{2}\left(\real^{2D},\frac{dxd\xi}{\left(2\pi\hbar\right)^{D}}\right)\to L^{2}\left(\real^{2D},\frac{dxd\xi}{\left(2\pi\hbar\right)^{D}}\right),\qquad S_{\hbar}u(x,\xi)=u(\hbar^{-1/2}x,\hbar^{-1/2}\xi)\label{eq:wI_hbar}
\end{equation}
Then we have
\begin{lem}
The following diagram commutes:

\[
\begin{CD}L^{2}\left(\mathbb{R}^{2D}\right)@>S_{\hbar}>>L^{2}\left(\mathbb{R}^{2D}\right)\\
@A\Bargmann_{\hbar}AA@A\Bargmann_{1}AA\\
L^{2}\left(\mathbb{R}^{D}\right)@>s_{\hbar}>>L^{2}\left(\mathbb{R}^{D}\right)
\end{CD}
\]

\end{lem}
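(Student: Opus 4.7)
The plan is to prove the identity $S_\hbar \circ \mathcal{B}_\hbar = \mathcal{B}_1 \circ s_\hbar$ by direct computation on Schwartz functions. Since both sides extend from $\mathcal{S}(\real^D)$ to $L^2(\real^D)$ by continuity (using Lemma \ref{lm:Bargmann_is_isometry} for the Bargmann transforms and the fact that $s_\hbar$ and $S_\hbar$ are bounded by construction), it suffices to verify the identity on $u\in\mathcal{S}(\real^D)$, where every integral in sight converges absolutely and Fubini applies freely.

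Starting from the right-hand side and unfolding the definitions,
\[
(\mathcal{B}_1 s_\hbar u)(x,\xi) = \pi^{-D/4}\,\hbar^{-D/4}\int_{\real^D} \exp\!\left(-i\xi\cdot(y-x/2) - \tfrac{1}{2}|y-x|^2\right) u(\hbar^{-1/2}y)\,dy.
\]
I would then perform the single change of variable $y = \hbar^{1/2} z$, whose Jacobian is $\hbar^{D/2}$. This is the natural semiclassical rescaling: it absorbs the $\hbar$ inside the argument of $u$, and, using $y-x/2 = \hbar^{1/2}(z - \hbar^{-1/2}x/2)$ and $|y-x|^2 = \hbar\,|z - \hbar^{-1/2}x|^2$, it rewrites the exponent so that the Gaussian and the phase pick up matching powers of $\hbar$. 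Combining the three $\hbar$-prefactors ($\pi^{-D/4}$, the $\hbar^{-D/4}$ from $s_\hbar$, and the Jacobian $\hbar^{D/2}$) reconstructs precisely the normalization constant $(\pi\hbar)^{-D/4}$ of $\mathcal{B}_\hbar$ in (\ref{eq:def_of_ah}). Comparing with (\ref{eq:def_of_phi})--(\ref{eq:def_Bargman_transform}), the integral is exactly the one defining $(\mathcal{B}_\hbar u)(\hbar^{-1/2}x,\hbar^{-1/2}\xi)$, which by definition of $S_\hbar$ equals $(S_\hbar \mathcal{B}_\hbar u)(x,\xi)$.

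There is no real obstacle here: the proof is pure bookkeeping of powers of $\hbar$. The conceptual content is that the Bargmann kernel (\ref{eq:def_of_phi}) and the constant (\ref{eq:def_of_ah}) are tailored so that the joint dilation $(x,\xi)\mapsto(\hbar^{1/2}x,\hbar^{1/2}\xi)$ on phase space intertwines the $\hbar=1$ Bargmann transform with the one at parameter $\hbar$, while $s_\hbar$ is the corresponding intertwining dilation on $L^2(\real^D)$. The only point requiring any care in practice is checking the three $\hbar$-prefactors combine correctly; everything else in the argument is forced by the change of variable.
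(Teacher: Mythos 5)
Your change of variable does not give what you claim, and the arithmetic it produces actually shows that the diagram as printed is not correct. After substituting $y=\hbar^{1/2}z$ in $(\mathcal{B}_1 s_\hbar u)(x,\xi)$, the three prefactors combine to $\pi^{-D/4}\cdot\hbar^{-D/4}\cdot\hbar^{D/2}=\pi^{-D/4}\hbar^{+D/4}$, which is \emph{not} $(\pi\hbar)^{-D/4}=\pi^{-D/4}\hbar^{-D/4}$: they differ by $\hbar^{D/2}$. Likewise the exponent becomes $-i\hbar^{1/2}\xi\cdot(z-\hbar^{-1/2}x/2)-\tfrac{\hbar}{2}|z-\hbar^{-1/2}x|^{2}$, whereas the kernel $\overline{\phi_{a,b}(z)}$ of $\mathcal{B}_{\hbar}$ at $(a,b)=(\hbar^{-1/2}x,\hbar^{-1/2}\xi)$ carries $-\tfrac{i}{\hbar^{3/2}}\xi\cdot(z-\hbar^{-1/2}x/2)-\tfrac{1}{2\hbar}|z-\hbar^{-1/2}x|^{2}$; both pieces differ by a factor $\hbar^{2}$. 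Matching the integrand you actually obtained against the general kernel shows that it is the Bargmann kernel at parameter $1/\hbar$, so what your computation proves is $\mathcal{B}_{1}\circ s_{\hbar}=S_{\hbar}\circ\mathcal{B}_{1/\hbar}$, which agrees with the lemma only when $\hbar=1$. Even more directly, evaluating both sides of the printed identity at $(x,\xi)=(0,0)$ gives $(\pi\hbar)^{-D/4}\int e^{-|y|^2/(2\hbar)}u(y)\,dy$ on the left and $\pi^{-D/4}\hbar^{D/4}\int e^{-\hbar|z|^2/2}u(z)\,dz$ on the right; these integrate $u$ against Gaussians of different widths and cannot coincide for all $u$ unless $\hbar=1$.

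The square that does commute has $\mathcal{B}_{1}$ and $\mathcal{B}_{\hbar}$ exchanged (equivalently, the horizontal arrows reversed): the correct relation is $\mathcal{B}_{\hbar}\circ s_{\hbar}=S_{\hbar}\circ\mathcal{B}_{1}$. Because $s_{\hbar}$ and $S_{\hbar}$ are dilations with $s_{\hbar}^{-1}=s_{1/\hbar}$ and $S_{\hbar}^{-1}=S_{1/\hbar}$, this is genuinely different from the printed $S_\hbar\circ\mathcal{B}_\hbar=\mathcal{B}_1\circ s_\hbar$, not a reformulation of it. Your method proves the correct version cleanly: apply the same substitution $y=\hbar^{1/2}z$ to $(\mathcal{B}_{\hbar}s_{\hbar}u)(x,\xi)$. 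The prefactors then combine to $(\pi\hbar)^{-D/4}\cdot\hbar^{-D/4}\cdot\hbar^{D/2}=\pi^{-D/4}$, exactly the normalization of $\mathcal{B}_{1}$, and the exponent
\[
-\tfrac{i}{\hbar}\xi\cdot(\hbar^{1/2}z-x/2)-\tfrac{1}{2\hbar}|\hbar^{1/2}z-x|^{2}
=-i(\hbar^{-1/2}\xi)\cdot\bigl(z-\tfrac{\hbar^{-1/2}x}{2}\bigr)-\tfrac{1}{2}|z-\hbar^{-1/2}x|^{2}
\]
is precisely the $\hbar=1$ kernel evaluated at $(\hbar^{-1/2}x,\hbar^{-1/2}\xi)$, i.e.\ $(S_{\hbar}\mathcal{B}_{1}u)(x,\xi)$. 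The moral is that the ``bookkeeping of powers of $\hbar$'' must actually be carried out; here it was the only line of defense against a transposition in the statement.
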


\subsection{Action of linear transforms}
\begin{defn}
The lift of a bounded operator $L:L^{2}(\real^{D})\to L^{2}(\real^{D})$
with respect to the Bargmann transform $\Bargmann_{\hbar}$ is defined
as the operator 
\begin{equation}
\Llift:=\Bargmann_{\hbar}\circ L\circ\Bargmann_{\hbar}^{*}:L^{2}(\real^{2D})\to L^{2}(\real^{2D})\label{eq:def_L_Lift}
\end{equation}

\end{defn}
By definition, it makes the following diagram commutes: 
\begin{equation}
\begin{CD}L^{2}(\real^{2D})@>{\Llift}>>L^{2}(\real^{2D})\\
@A{\Bargmann_{\hbar}}AA@A{\Bargmann_{\hbar}}AA\\
L^{2}(\real^{D})@>{L}>>L^{2}(\real^{D}).
\end{CD}\label{cd:lift}
\end{equation}
 Since $\BargmannP_{\hbar}\circ\Llift\circ\BargmannP_{\hbar}=\Llift$,
the lift $\Llift$ is always trivial on the second factor with respect
to the decomposition $L^{2}(\real^{2D})=\mbox{Im}\Bargmann_{\hbar}\oplus\ker\Bargmann_{\hbar}^{*}=\mathrm{Im}\,\BargmannP_{\hbar}\oplus\ker\,\BargmannP_{\hbar}$
in (\ref{eq:decomp_L2R2D}), that is, 
\begin{equation}
\Llift=\left(\Bargmann_{\hbar}\circ L\circ\Bargmann_{\hbar}^{*}\right)_{\mathrm{Im}\BargmannP_{\hbar}}\oplus\left(\mathbf{0}\right)_{\mathrm{Ker}\BargmannP_{\hbar}}.\label{eq:product_form_of_the_lifted_operator}
\end{equation}

\begin{lem}
\label{lm:lift_of_LA} For an invertible linear transformation $A:\real^{D}\to\real^{D}$,
we associate a bounded transfer operator\textbf{ }defined by 
\begin{equation}
L_{A}:L^{2}(\real^{D})\to L^{2}(\real^{D}),\quad L_{A}u=u\circ A^{-1}\label{eq:def_LA}
\end{equation}
Then we have
\begin{equation}
L_{A}=d(A)\cdot\Bargmann_{\hbar}^{*}\circ L_{A\oplus{}^{t}A^{-1}}\circ\Bargmann_{\hbar},\label{eq:claim}
\end{equation}
where $L_{A\oplus{}^{t}A^{-1}}:L^{2}(\real^{2D})\to L^{2}(\real^{2D})$
is the unitary transfer operator given by 
\begin{equation}
\left(L_{A\oplus{}^{t}A^{-1}}u\right)(x,\xi):=u\left(A^{-1}x,{}^{t}A\xi\right)\label{eq:ld}
\end{equation}
 and we set 
\[
d(A):=\det\left(\frac{1}{2}\left(1+{}^{t}AA\right)\right)^{1/2}.
\]
Consequently the lift of $L_{A}$, 
\begin{equation}
\Llift_{A}:=\Bargmann_{\hbar}\circ L_{A}\circ\Bargmann_{\hbar}^{*}\label{def_Lift_A}
\end{equation}
is expressed as 
\begin{equation}
\Llift_{A}=d(A)\cdot\BargmannP_{\hbar}\circ L_{A\oplus{}^{t}A^{-1}}\circ\BargmannP_{\hbar}.\label{eq:Lift_A_expression}
\end{equation}
 \end{lem}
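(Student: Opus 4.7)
The plan is a direct Gaussian computation: I will compute the Schwartz kernel of $\Bargmann_{\hbar}^{*}\circ L_{A\oplus{}^{t}A^{-1}}\circ\Bargmann_{\hbar}$, show it equals $d(A)^{-1}$ times the kernel of $L_{A}$, which gives (\ref{eq:claim}). The formula (\ref{eq:Lift_A_expression}) then follows by sandwiching with $\Bargmann_{\hbar}$ and $\Bargmann_{\hbar}^{*}$ and using $\Bargmann_{\hbar}\Bargmann_{\hbar}^{*}=\BargmannP_{\hbar}$ from Proposition \ref{prop:The-Bargmann-projector}.

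First, I expand the kernel of $\Bargmann_{\hbar}^{*}\circ L_{A\oplus{}^{t}A^{-1}}\circ\Bargmann_{\hbar}$ using (\ref{eq:def_Bargman_transform}), (\ref{eq:B*}) and (\ref{eq:ld}):
\[
K(z,y)=\int\phi_{x,\xi}(z)\,\overline{\phi_{A^{-1}x,{}^{t}A\xi}(y)}\,\frac{dxd\xi}{(2\pi\hbar)^{D}}.
\]
I perform the change of variables $x'=A^{-1}x$, $\xi'={}^{t}A\xi$, whose Jacobian is $1$, and use $({}^{t}A^{-1}\xi')\cdot(z-Ax'/2)=\xi'\cdot(A^{-1}z-x'/2)$ so that the phase in the exponent collapses to $\frac{i}{\hbar}\xi'\cdot(A^{-1}z-y)$. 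The $\xi'$-integration then yields the delta function $\delta(A^{-1}z-y)$ (using the identity $\int e^{(i/\hbar)\xi'\cdot w}d\xi'/(2\pi\hbar)^{D}=\delta(w)$). This reduces $K(z,y)$ to
\[
K(z,y)=a_{D}^{2}\,\delta(A^{-1}z-y)\int\exp\!\left(-\frac{1}{2\hbar}\bigl(|z-Ax'|^{2}+|A^{-1}z-x'|^{2}\bigr)\right)dx',
\]
where I have already imposed $y=A^{-1}z$ inside the Gaussian thanks to the delta factor.

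The translation $w=x'-A^{-1}z$ converts the Gaussian integrand into $\exp(-\tfrac{1}{2\hbar}w\cdot({}^{t}AA+I)w)$, and the Gaussian formula (\ref{eq:Gaussian_integrale}) gives $(2\pi\hbar)^{D/2}/\det({}^{t}AA+I)^{1/2}$. Tracking constants, $a_{D}^{2}(2\pi\hbar)^{D/2}=2^{D/2}$ by (\ref{eq:def_of_ah}), while $\det({}^{t}AA+I)^{1/2}=2^{D/2}d(A)$ by the definition of $d(A)$, so the factors reduce exactly to $d(A)^{-1}$. Therefore
\[
K(z,y)=d(A)^{-1}\,\delta(A^{-1}z-y),
\]
which is the kernel of $d(A)^{-1}L_{A}$, proving (\ref{eq:claim}). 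Composing with $\Bargmann_{\hbar}$ on the left and $\Bargmann_{\hbar}^{*}$ on the right in (\ref{def_Lift_A}) and invoking $\Bargmann_{\hbar}\Bargmann_{\hbar}^{*}=\BargmannP_{\hbar}$ yields (\ref{eq:Lift_A_expression}).

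The only obstacle is bookkeeping: keeping the Jacobians, the constants $a_{D}$, the factors of $\hbar$ in the $\xi'$-integration, and the two factors of $2^{D/2}$ straight, since a single misplacement would spoil the identification of the normalization constant as precisely $d(A)$. No deeper ingredient beyond the Gaussian identity and the explicit form of $\phi_{x,\xi}$ is needed.
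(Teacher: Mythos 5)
Your proof is correct and follows essentially the same route as the paper: writing out the Schwartz kernel of $\Bargmann_{\hbar}^{*}\circ L_{A\oplus{}^{t}A^{-1}}\circ\Bargmann_{\hbar}$, making the unit-Jacobian change of variables $(x,\xi)\mapsto(A^{-1}x,{}^{t}A\xi)$, integrating out the dual variable to produce the delta function, and evaluating the remaining Gaussian with the formula (\ref{eq:Gaussian_integrale}) to extract the constant $d(A)^{-1}$. The bookkeeping in your sketch (including $a_D^2(2\pi\hbar)^{D/2}=2^{D/2}$ and $\det(I+{}^{t}AA)^{1/2}=2^{D/2}d(A)$) matches the paper's computation.
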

\begin{rem}
\label{rem3.10}
\begin{enumerate}
\item The expression (\ref{eq:claim}) shows that $L_{A}$ can be expressed
as an operator on the phase space defined in terms of the Bargmann
projector and the transfer operator $L_{A\oplus{}^{t}A^{-1}}$, but
with an additional factor $d\left(A\right)$, sometimes called the\emph{
metaplectic correction}.\textbf{ }This may be regarded as a realization
of the idea explained in the last section: $L_{A}$ can be seen as
a Fourier integral operator and canonical map is $A\oplus{}^{t}A^{-1}$
on $T^{*}\mathbb{R}^{D}$. But notice that the correction term $d(A)$
will be crucially important for our argument. 
\item For an orthogonal transform $A\in SO\left(D\right)$, we have $d\left(A\right)=1$.
\item The operator $\frac{1}{\sqrt{\left|\mathrm{det}A\right|}}L_{A}$ is
unitary in $L^{2}\left(\mathbb{R}^{D}\right)$ but because in the
main result of this Section, Proposition \ref{pp:structure_of_hLA},
the Hilbert space is not $L^{2}\left(\mathbb{R}^{D}\right)$, we don't
care about this property. It will however have some importance later
in Proposition \ref{prop:prequatum_op_for_hyp_linear} where the factor
$\frac{1}{\sqrt{\left|\mathrm{det}A\right|}}$ will therefore appear.
\end{enumerate}
\end{rem}
\begin{proof}
To prove (\ref{eq:claim}), we write the operator $\Bargmann_{\hbar}^{*}\circ L_{A\oplus{}^{t}A^{-1}}\circ\Bargmann_{\hbar}$
as an integral operator 
\[
\left(\Bargmann_{\hbar}^{*}\circ L_{A\oplus{}^{t}A^{-1}}\circ\Bargmann_{\hbar}u\right)(y)=\int K(y,y')u(y')dy'
\]
 with the kernel (from (\ref{eq:def_Bargman_transform}) and (\ref{eq:B*}))
\[
K(y,y')=\int\phi_{x,\xi}(y)\cdot\overline{\phi_{A^{-1}x,{}^{t}A\xi}(y')}\;\frac{dxd\xi}{\left(2\pi\hbar\right)^{D}}.
\]
Using the formula (\ref{eq:Gaussian_integrale}) for the Gaussian
integral and change of variables, we can calculate the integral on
the right hand side as 
\begin{align*}
K(y,y') & =\int\phi_{Ax',{}^{t}A^{-1}\xi'}(y)\cdot\overline{\phi_{x',\xi'}(y')}\;\frac{dx'd\xi'}{\left(2\pi\hbar\right)^{D}},\qquad(x'=A^{-1}x,\;\xi'={}^{t}A\xi)\\
 & =a_{D}^{2}\cdot\int e^{\frac{i}{\hbar}\langle\xi',y'-A^{-1}y\rangle-|y'-x'|^{2}/(2\hbar)-|y-Ax'|^{2}/(2\hbar)}\frac{dx'd\xi'}{\left(2\pi\hbar\right)^{D}}\\
 & =(\pi\hbar)^{-D/2}\cdot\delta(y'-A^{-1}y)\cdot\int e^{-|A^{-1}y-x'|^{2}/(2\hbar)-|y-Ax'|^{2}/(2\hbar)}dx'\\
 & =\pi^{-D/2}\cdot\delta(y'-A^{-1}y)\cdot\int e^{-|t|^{2}/2-|At|^{2}/2}dt\qquad(t=(x'-A^{-1}y)/\hbar)\\
 & =\det((I+{}^{t}AA)/2)^{-1/2}\cdot\delta(y'-A^{-1}y).
\end{align*}
 Therefore we have 
\[
(\Bargmann_{\hbar}^{*}\circ L_{A\oplus{}^{t}A^{-1}}\circ\Bargmann_{\hbar})u(y)=\det((I+{}^{t}AA)/2)^{-1/2}\cdot u(A^{-1}y)=d(A)^{-1}\cdot\left(L_{A}u\right)(y)
\]
 and hence the claim (\ref{eq:claim}) follows. This implies 
\begin{align*}
\Llift_{A} & =\Bargmann_{\hbar}\circ L_{A}\circ\Bargmann_{\hbar}^{*}=d(A)\cdot\BargmannP_{\hbar}\circ L_{A\oplus{}^{t}A^{-1}}\circ\BargmannP_{\hbar}.
\end{align*}
The other claims follow immediately.
\end{proof}
The next Lemma introduces operators which realize translation in phase
space $T^{*}\mathbb{R}^{D}$. In \cite{folland-88,perelomov1} it
is shown that this gives a unitary irreducible representation of the
Weyl-Heisenberg group.
\begin{lem}
\label{lm:lift_of_LA-1} For $\left(x_{0},\xi_{0}\right)\in T^{*}\mathbb{R}^{D}=\real^{2D}$,
we associate a unitary operator\textbf{ }defined by 
\begin{equation}
T_{\left(x_{0},\xi_{0}\right)}:L^{2}(\real^{D})\to L^{2}(\real^{D}),\quad\left(T_{\left(x_{0},\xi_{0}\right)}v\right)\left(y\right)=e^{\frac{i}{\hbar}\xi_{0}.\left(y+\frac{x_{0}}{2}\right)}v\left(y-x_{0}\right)\label{eq:def_LA-1}
\end{equation}
Then we have 
\begin{equation}
T_{\left(x_{0},\xi_{0}\right)}=\Bargmann_{\hbar}^{*}\circ\mathcal{T}_{\left(x_{0},\xi_{0}\right)}\circ\Bargmann_{\hbar},\label{eq:claim-1}
\end{equation}
 where $\mathcal{T}_{\left(x_{0},\xi_{0}\right)}:L^{2}(\real^{2D})\to L^{2}(\real^{2D})$
is the unitary transfer operator given by 
\begin{equation}
\mathcal{T}_{\left(x_{0},\xi_{0}\right)}u(x,\xi):=e^{\frac{i}{2\hbar}\left(\xi_{0}\cdot x-x_{0}\cdot\xi\right)}u\left(x-x_{0},\xi-\xi_{0}\right)\label{eq:ld-1}
\end{equation}
Consequently the lift of $T_{\left(x_{0},\xi_{0}\right)}$ is 
\begin{equation}
T_{\left(x_{0},\xi_{0}\right)}^{\mathrm{lift}}:=\Bargmann_{\hbar}\circ T_{\left(x_{0},\xi_{0}\right)}\circ\Bargmann_{\hbar}^{*}=\BargmannP_{\hbar}\circ\mathcal{T}_{\left(x_{0},\xi_{0}\right)}\circ\BargmannP_{\hbar}.\label{def_Lift_A-1}
\end{equation}
 \end{lem}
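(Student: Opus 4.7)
The plan is to follow exactly the template of the proof of Lemma \ref{lm:lift_of_LA}: write $\Bargmann_{\hbar}^{*}\circ\mathcal{T}_{(x_0,\xi_0)}\circ\Bargmann_{\hbar}$ as an integral operator, carry out the Gaussian integrals explicitly, and identify the resulting kernel as a translated delta function multiplied by the phase appearing in the definition of $T_{(x_0,\xi_0)}$. The lift formula then follows from Proposition \ref{prop:The-Bargmann-projector}.

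Concretely, I first note that, using the definitions (\ref{eq:def_Bargman_transform}) and (\ref{eq:B*}), the operator $\Bargmann_{\hbar}^{*}\circ\mathcal{T}_{(x_0,\xi_0)}\circ\Bargmann_{\hbar}$ has Schwartz kernel
\[
K(y,y')=\int \phi_{x,\xi}(y)\,e^{\frac{i}{2\hbar}(\xi_0\cdot x-x_0\cdot\xi)}\,\overline{\phi_{x-x_0,\xi-\xi_0}(y')}\,\frac{dx\,d\xi}{(2\pi\hbar)^{D}}.
\]
This is a purely Gaussian integral in $(x,\xi)\in\real^{2D}$. I would isolate the coefficient of $\xi$ in the exponent, which after straightforward simplification equals $\tfrac{i}{\hbar}\xi\cdot(y-y'-x_0)$; the $\xi$-integration therefore produces the distribution $\delta(y-y'-x_0)$, which implements the shift $v\mapsto v(\,\cdot\,-x_0)$. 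The remaining integrand is a Gaussian in $x$ whose quadratic part is $-\frac{1}{\hbar}|x-(y-x_0/2)|^{2}$ on the support of the delta; applying the standard Gaussian formula (\ref{eq:Gaussian_integrale}) gives a factor $(\pi\hbar)^{D/2}$ that exactly cancels $a_D^2=(\pi\hbar)^{-D/2}$, while the $\xi_0$-linear part of the exponent collapses to the phase appearing in the statement of the lemma. One then reads off the identity~(\ref{eq:claim-1}).

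Once (\ref{eq:claim-1}) is established, the lift formula is immediate. By definition $T^{\mathrm{lift}}_{(x_0,\xi_0)}=\Bargmann_{\hbar}\circ T_{(x_0,\xi_0)}\circ\Bargmann_{\hbar}^{*}$; substituting~(\ref{eq:claim-1}) and using $\Bargmann_{\hbar}\circ\Bargmann_{\hbar}^{*}=\BargmannP_{\hbar}$ from Proposition~\ref{prop:The-Bargmann-projector} yields $T^{\mathrm{lift}}_{(x_0,\xi_0)}=\BargmannP_{\hbar}\circ\mathcal{T}_{(x_0,\xi_0)}\circ\BargmannP_{\hbar}$ directly. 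Unitarity of $\mathcal{T}_{(x_0,\xi_0)}$ on $L^{2}(\real^{2D})$ is clear from its definition (multiplication by a phase followed by translation, both isometric).

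The main obstacle is purely bookkeeping: keeping track of the factors of $\tfrac{1}{2\hbar}$ and the cross terms between $x$, $\xi$, $x_0$, $\xi_0$ in the quadratic exponent, so that the residual phase matches $e^{\frac{i}{\hbar}\xi_0\cdot(y+x_0/2)}$ rather than $e^{\frac{i}{\hbar}\xi_0\cdot(y-x_0/2)}$; the cocycle phase $\tfrac{1}{2}(\xi_0 x-x_0 \xi)$ built into the definition of $\mathcal{T}_{(x_0,\xi_0)}$ is precisely what is needed for the intertwining to be exact (rather than up to a global phase), reflecting the fact that this is a genuine unitary representation of the Weyl--Heisenberg group. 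I would verify the bookkeeping by checking, as a sanity check, the two extreme cases $\xi_0=0$ (pure translation) and $x_0=0$ (pure modulation), where the identity (\ref{eq:claim-1}) reduces to elementary computations on $\mathcal{B}_\hbar$.
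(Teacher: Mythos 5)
Your plan reproduces the paper's proof exactly: write out the kernel of $\Bargmann_\hbar^*\circ\mathcal{T}_{(x_0,\xi_0)}\circ\Bargmann_\hbar$ as a Gaussian integral, isolate the $\xi$-dependence to produce $\delta(y'-y+x_0)$, perform the residual Gaussian integral in $x$ to cancel $a_D^2$, and identify the leftover phase; the lift formula then follows from $\Bargmann_\hbar\Bargmann_\hbar^*=\BargmannP_\hbar$ as you note. This is the same approach as the paper.

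Two of your bookkeeping claims are off, however. First, a minor one: on the support of the delta one has $y'-(x-x_0)=y-x$, so the residual $x$-integrand is $e^{-|y-x|^2/\hbar}$, centered at $x=y$ rather than at $x=y-x_0/2$; this does not change the value of the integral. Second, a more significant one: collecting the $\xi_0$-linear terms --- namely $\frac{i}{2\hbar}\xi_0\cdot x$ from the cocycle factor and $\frac{i}{\hbar}\xi_0\cdot\bigl(y'-\frac{x-x_0}{2}\bigr)$ from $\overline{\phi_{x-x_0,\xi-\xi_0}(y')}$ --- gives $\frac{i}{\hbar}\xi_0\cdot\bigl(y'+\frac{x_0}{2}\bigr)$, which on the delta constraint $y'=y-x_0$ equals $\frac{i}{\hbar}\xi_0\cdot\bigl(y-\frac{x_0}{2}\bigr)$, \emph{not} $y+\frac{x_0}{2}$ as you assert. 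You can confirm the minus sign with exactly the sanity check you propose: applying $\Bargmann_\hbar^*\circ\mathcal{T}_{(x_0,\xi_0)}\circ\Bargmann_\hbar$ to $\phi_{0,0}$ and using the Bargmann projector kernel (\ref{eq:Bargman_Kernel}), one finds $\mathcal{T}_{(x_0,\xi_0)}\Bargmann_\hbar\phi_{0,0}=\Bargmann_\hbar\phi_{x_0,\xi_0}$, so the image is exactly $\phi_{x_0,\xi_0}$; whereas the definition (\ref{eq:def_LA-1}) maps $\phi_{0,0}$ to $e^{\frac{i}{\hbar}\xi_0\cdot x_0}\phi_{x_0,\xi_0}$ --- a global phase discrepancy of $e^{\frac{i}{\hbar}\xi_0\cdot x_0}$. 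The same $+\frac{x_0}{2}$ appears in the paper's own final displayed line, so this is most likely a sign typo in (\ref{eq:def_LA-1}) propagated into the proof (the definition should read $y-\frac{x_0}{2}$); it is harmless in the one place the lemma is used, since Lemma \ref{prop:affine_Lf} takes $(x_0,\xi_0)=(b,-\tfrac12 b^\flat)$ for which $\xi_0\cdot x_0=-\tfrac12\omega(b,b)=0$. Still, you should not simply assert the sign you want --- this cocycle phase is precisely the delicate point you yourself flag as the main obstacle, and the algebra does not deliver it.
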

\begin{proof}
The Schwartz kernel of $\Bargmann_{\hbar}^{*}\circ\mathcal{T}_{\left(x_{0},\xi_{0}\right)}\circ\Bargmann_{\hbar}$
is 
\begin{align*}
 & K\left(y,y'\right)=\int_{\mathbb{R}^{2D}}e^{\frac{i}{2\hbar}\left(\xi_{0}\cdot x-x_{0}\cdot\xi\right)}\phi_{x,\xi}(y)\cdot\overline{\phi_{x-x_{0},\xi-\xi_{0}}(y')}\;\frac{dxd\xi}{\left(2\pi\hbar\right)^{D}}\\
= & a_{D}^{2}\cdot\int\frac{dxd\xi}{\left(2\pi\hbar\right)^{D}}\: e^{\frac{i}{2\hbar}\left(\xi_{0}\cdot x-x_{0}\cdot\xi\right)}e^{\frac{i}{\hbar}\left(\xi\cdot\left(y-\frac{x}{2}\right)-\left(\xi-\xi_{0}\right)\cdot\left(y'-\frac{x-x_{0}}{2}\right)\right)}e^{-\frac{1}{2\hbar}\left(|y'-\left(x-x_{0}\right)|^{2}+|y-x|^{2}\right)}\\
= & \left(\pi\hbar\right)^{-D/2}\delta\left(y'-y+x_{0}\right)e^{\frac{i}{\hbar}\xi_{0}\cdot\left(y+\frac{x_{0}}{2}\right)}\int e^{-\frac{1}{\hbar}|y-x|^{2}}dx\\
= & \delta\left(y'-y+x_{0}\right)\cdot e^{\frac{i}{\hbar}\xi_{0}\cdot\left(y+\frac{x_{0}}{2}\right)}.
\end{align*}
This is the kernel of the operator $T_{\left(x_{0},\xi_{0}\right)}$.
\end{proof}

\begin{cor}
\label{cor:4.11}The lift of the operator $T_{\left(x_{0},\xi_{0}\right)}$
is expressed as 
\end{cor}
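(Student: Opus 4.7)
The expected statement is the intertwining/commutation identity
$T^{\mathrm{lift}}_{(x_0,\xi_0)} = \calT_{(x_0,\xi_0)}\circ\BargmannP_{\hbar} = \BargmannP_{\hbar}\circ\calT_{(x_0,\xi_0)}$, which simplifies the sandwich formula of Lemma \ref{lm:lift_of_LA-1} by removing one $\BargmannP_{\hbar}$ factor. The plan is to prove this as a direct consequence of the fact that the phase-space translation $\calT_{(x_0,\xi_0)}$ preserves $\mathrm{Im}\,\Bargmann_{\hbar}$, and more precisely that it intertwines $\Bargmann_{\hbar}$ with $T_{(x_0,\xi_0)}$.

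First, I would compute the action of $T_{(-x_0,-\xi_0)}$ (which is $T_{(x_0,\xi_0)}^{*}$) on a coherent state $\phi_{x,\xi}$. Unfolding the definition \eqref{eq:def_of_phi} and \eqref{eq:def_LA-1} and completing the square in the exponent, one obtains the wave-packet transformation rule
\begin{equation*}
T_{(-x_0,-\xi_0)}\phi_{x,\xi}\;=\;e^{\frac{i}{2\hbar}(\xi_0\cdot x-x_0\cdot\xi)}\,\phi_{x-x_0,\,\xi-\xi_0}.
\end{equation*}
This is just Weyl-Heisenberg bookkeeping: the symmetric choice $\xi_0\cdot(y+x_0/2)$ in $T_{(x_0,\xi_0)}$ is exactly designed so that the resulting cocycle is the antisymmetric phase $\tfrac{1}{2\hbar}(\xi_0\cdot x-x_0\cdot\xi)$, which is the same one appearing in the definition \eqref{eq:ld-1} of $\calT_{(x_0,\xi_0)}$.

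Plugging this into $\Bargmann_{\hbar}(T_{(x_0,\xi_0)}u)(x,\xi)=\langle T_{(-x_0,-\xi_0)}\phi_{x,\xi},u\rangle$ yields the intertwining relation
\begin{equation*}
\Bargmann_{\hbar}\circ T_{(x_0,\xi_0)}\;=\;\calT_{(x_0,\xi_0)}\circ\Bargmann_{\hbar},
\end{equation*}
and taking adjoints (both operators are unitary on their respective $L^2$ spaces) gives $\Bargmann_{\hbar}^{*}\circ\calT_{(x_0,\xi_0)}=T_{(x_0,\xi_0)}\circ\Bargmann_{\hbar}^{*}$. The two claimed formulas then follow immediately by composing with $\Bargmann_{\hbar}^{*}$ on the right or with $\Bargmann_{\hbar}$ on the left:
\begin{equation*}
T^{\mathrm{lift}}_{(x_0,\xi_0)}=\Bargmann_{\hbar}T_{(x_0,\xi_0)}\Bargmann_{\hbar}^{*}=\calT_{(x_0,\xi_0)}\Bargmann_{\hbar}\Bargmann_{\hbar}^{*}=\calT_{(x_0,\xi_0)}\BargmannP_{\hbar},
\end{equation*}
and similarly for the other order.

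The only real obstacle is the phase bookkeeping in the first step. Alternatively one can verify the commutation $\BargmannP_{\hbar}\calT_{(x_0,\xi_0)}=\calT_{(x_0,\xi_0)}\BargmannP_{\hbar}$ directly from the explicit Gaussian kernel $K_{\mathcal{P},\hbar}$ in \eqref{eq:Bargman_Kernel}, using the translation invariance of $|z-z'|^2$ and the bilinearity of $\omega$; both routes are elementary, the coherent-state route being slightly cleaner since the Weyl-symmetric phase in $T_{(x_0,\xi_0)}$ is precisely what makes the induced cocycle match that of $\calT_{(x_0,\xi_0)}$.
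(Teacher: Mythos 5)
Your route is genuinely different from the paper's. The paper proves the commutation $\calT_{(x_0,\xi_0)}\circ\BargmannP_{\hbar}=\BargmannP_{\hbar}\circ\calT_{(x_0,\xi_0)}$ by a direct comparison of the Schwartz kernels of the two sides using the explicit Gaussian \eqref{eq:Bargman_Kernel}, and then collapses the sandwich $T^{\mathrm{lift}}_{(x_0,\xi_0)}=\BargmannP_{\hbar}\circ\calT_{(x_0,\xi_0)}\circ\BargmannP_{\hbar}$ of Lemma~\ref{lm:lift_of_LA-1} via $\BargmannP_{\hbar}^{2}=\BargmannP_{\hbar}$. You instead prove the full intertwining $\Bargmann_{\hbar}\circ T_{(x_0,\xi_0)}=\calT_{(x_0,\xi_0)}\circ\Bargmann_{\hbar}$ and obtain the corollary by composing once with $\Bargmann_{\hbar}^{*}$ and once by taking adjoints. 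This is strictly stronger than the commutation (and in fact implies it), and explains conceptually why the sandwich collapses; as a matter of clarity it is the more illuminating route.

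There are, however, two sign issues you should repair. First, with $\Bargmann_{\hbar}u(x,\xi)=\langle\phi_{x,\xi},u\rangle$ one has $\Bargmann_{\hbar}(T_{(x_0,\xi_0)}u)(x,\xi)=\langle T^{*}_{(x_0,\xi_0)}\phi_{x,\xi},u\rangle$, so if $T^{*}_{(x_0,\xi_0)}\phi_{x,\xi}=c\,\phi_{x-x_0,\xi-\xi_0}$ then $\Bargmann_{\hbar}(T_{(x_0,\xi_0)}u)(x,\xi)=\bar c\,(\Bargmann_{\hbar}u)(x-x_0,\xi-\xi_0)$. Matching the phase of $\calT_{(x_0,\xi_0)}$ therefore requires $c=e^{-\frac{i}{2\hbar}(\xi_0\cdot x-x_0\cdot\xi)}$, the \emph{conjugate} of the phase you wrote; with the phase as you stated it, the intertwining would fail by a factor $e^{-\frac{i}{\hbar}(\xi_0\cdot x-x_0\cdot\xi)}$. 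Second, your identification $T^{*}_{(x_0,\xi_0)}=T_{(-x_0,-\xi_0)}$ is true only for the Weyl-symmetric phase $e^{\frac{i}{\hbar}\xi_0\cdot(y-x_0/2)}$, whereas \eqref{eq:def_LA-1} as printed has $y+x_0/2$. The Gaussian integral carried out in the proof of Lemma~\ref{lm:lift_of_LA-1} in fact produces the kernel $e^{\frac{i}{\hbar}\xi_0\cdot(y-x_0/2)}\,\delta(y'-y+x_0)$, so the printed $+$ is a sign typo and the Weyl-symmetric convention that your argument implicitly relies on is the intended one. Once those two signs are fixed, the argument is complete.
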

\begin{equation}
T_{(x_{0},\xi_{0})}^{\mathrm{lift}}:=\Bargmann_{\hbar}\circ T_{(x_{0},\xi_{0})}\circ\Bargmann_{\hbar}^{*}=\mathcal{T}_{\left(x_{0},\xi_{0}\right)}\circ\BargmannP_{\hbar}=\BargmannP_{\hbar}\circ\mathcal{T}_{\left(x_{0},\xi_{0}\right)}.\label{eq:observation_L}
\end{equation}

\begin{proof}
We can check the equality $\mathcal{T}_{\left(x_{0},\xi_{0}\right)}\circ\BargmannP_{\hbar}=\BargmannP_{\hbar}\circ\mathcal{T}_{\left(x_{0},\xi_{0}\right)}$
on the left by showing that the Schwartz kernels of $\mathcal{T}_{\left(x_{0},\xi_{0}\right)}\circ\BargmannP_{\hbar}$
and $\BargmannP\circ\mathcal{T}_{\left(x_{0},\xi_{0}\right)}$ are
equal. This an easy computation using the expressions (\ref{eq:Bargman_Kernel})
and (\ref{eq:ld-1}). The rest of the claim follows from Lemma \ref{lm:lift_of_LA-1}.
\end{proof}

\subsection{The weighted $L^{2}$ spaces : $L^{2}(\real^{2D},(W_{\hbar}^{r})^{2})$\label{ss:wL2} }

For each $t>0$, we define the cones in $\mathbb{R}^{2D}$: 
\begin{align*}
{\mathbf{C}}_{+}(t) & =\{(x,\xi)\in\real^{2D}\mid|\xi|\le t\cdot|x|\},\quad{\mathbf{C}}_{-}(t)=\{(x,\xi)\in\real^{2D}\mid|x|\le t\cdot|\xi|\}.
\end{align*}
 Let $r>0$. Take and fix a $C^{\infty}$ function $m:\mathbb{P}\left(\real^{2D}\right)\to[-r,r]$,
called \emph{order function}, on the projective space $\mathbb{P}\left(\real^{2D}\right)$
so that 
\begin{equation}
m\left(\left[\left(x,\xi\right)\right]\right)=\begin{cases}
-r, & \quad\mbox{if \ensuremath{(x,\xi)\in{\mathbf{C}}_{+}(1/2)}};\\
+r, & \quad\mbox{if \ensuremath{(x,\xi)\in{\mathbf{C}}_{-}(1/2)}}.
\end{cases}\label{eq:def_order_function_m}
\end{equation}
We then define the\emph{ escape function} (or \emph{the weight function)
by} 
\[
W^{r}:\real^{2D}\to\real_{+},\qquad W^{r}(x,\xi)=\langle|(x,\xi)|\rangle^{m([(x,\xi)])}
\]
where $\left\langle s\right\rangle :=\left(1+s^{2}\right)^{1/2}$
for $s\in\mathbb{R}$ and $|(x,\xi)|^{2}:=\left|x\right|^{2}+\left|\xi\right|^{2}$.
From this definition we have 
\[
W^{r}(x,\xi)\sim\langle|(x,\xi)|\rangle^{-r}\quad\mbox{if \ensuremath{|x|\ge2|\xi|}and \ensuremath{|(x,\xi)|\gg1}}
\]
 and 
\[
W^{r}(x,\xi)\sim\langle|(x,\xi)|\rangle^{r}\quad\mbox{if \ensuremath{|x|\le|\xi|/2}and \ensuremath{|(x,\xi)|\gg1}.}
\]

For convenience in a later argument, we also take and fix $C^{\infty}$
functions 
\[
m^{+},m^{-}:\mathbb{P}\left(\real^{2D}\right)\to[-r,r]
\]
 such that 
\begin{align*}
m^{+}([(x,\xi)]) & =\begin{cases}
-r, & \quad\mbox{if \ensuremath{(x,\xi)\in{\mathbf{C}}_{+}(1/9);}}\\
+r, & \quad\mbox{if \ensuremath{(x,\xi)\in{\mathbf{C}}_{-}(3)}, }
\end{cases}\intertext{and}m^{-}([(x,\xi)]) & =\begin{cases}
-r, & \quad\mbox{if \ensuremath{(x,\xi)\in{\mathbf{C}}_{+}(3);}}\\
+r, & \quad\mbox{if \ensuremath{(x,\xi)\in{\mathbf{C}}_{-}(1/9)},}
\end{cases}
\end{align*}
 and define the functions $W^{r,\pm}:\real^{2D}\to\real_{+}$ by 
\begin{equation}
W^{r,\pm}(x,\xi):=\langle|(x,\xi)|\rangle^{m^{\pm}([(x,\xi)])}.\label{eq:def_Wr_+-}
\end{equation}
Obviously we have 
\begin{equation}
W^{r,-}(x,\xi)\le W^{r}(x,\xi)\le W^{r,+}(x,\xi).\label{eq:wrpm}
\end{equation}
These functions, $W^{r}$ and $W^{r,\pm}$, satisfy the following
regularity estimate that we will make use of later on: For any $\epsilon>0$
and multi-index $\alpha$, there exists a constant $C_{\alpha,\epsilon}>0$
such that 
\begin{equation}
|\partial_{x,\xi}^{\alpha}W^{r}(x,\xi)|\le C_{\alpha,\epsilon}\langle|(x,\xi)|\rangle^{-(1-\epsilon)|\alpha|}\cdot W^{r}(x,\xi)\quad\mbox{for all \ensuremath{(x,\xi)\in\real^{2D}}}\label{eq:W^r_in_S_1-epsilon}
\end{equation}
and the same inequalities for $W^{r,\pm}(\cdot)$ hold.
\begin{defn}
\label{def:Wrh}For $\hbar>0$, let 
\begin{equation}
W_{\hbar}^{r}:\real^{2D}\to\real_{+},\quad W_{\hbar}^{r}(x,\xi):=W^{r}(\hbar^{-1/2}x,\hbar^{-1/2}\xi)=\left(S_{\hbar}W^{r}\right)(x,\xi)\label{eq:def_Wh}
\end{equation}
 where $S_{\hbar}$ is the operator defined in (\ref{eq:wI_hbar}).
We consider the weighted $L^{2}$ space defined as 
\begin{equation}
L^{2}(\real^{2D},(W_{\hbar}^{r})^{2})=\{v\in L_{\mathrm{loc}}^{2}(\real^{2D})\mid\|W_{\hbar}^{r}\cdot v\|_{L^{2}(\real^{2D})}<\infty\}.\label{eq:L2Wr}
\end{equation}
 Likewise, we define the functions $W_{\hbar}^{r,\pm}$ and the weighted
$L^{2}$ spaces $L^{2}(\real^{2D},(W_{\hbar}^{r,\pm})^{2})$ in the
parallel manner, replacing $W^{r}$ by $W^{r,\pm}$. 
\end{defn}
Note that the function $W^{r}$ (and $W^{r,\pm}$) satisfies the condition
\begin{equation}
W^{r}(x,\xi)\le C\cdot W^{r}(y,\eta)\cdot\langle|(x,\xi)-(y,\eta)|\rangle^{2r}\quad\mbox{for any \ensuremath{x,y\in\real^{d}}}\label{eq:W_order_function}
\end{equation}
 for some constant $C>0$. Consequently the function $W_{\hbar}^{r}$
(and $W_{\hbar}^{r,\pm}$) satisfies 
\begin{equation}
W_{\hbar}^{r}(x,\xi)\le C\cdot W_{\hbar}^{r}(y,\eta)\cdot\langle\hbar^{-1/2}|(x,\xi)-(y,\eta)|\rangle^{2r}\quad\mbox{for any \ensuremath{x,y\in\real^{d}}.}\label{eq:W_order_function_hbar}
\end{equation}

The next Lemma characterizes a class of bounded integral operators
in $L^{2}(\real^{2D},(W_{\hbar}^{r})^{2})$ in terms of its kernel.
\begin{lem}
\label{lm:boundedness_of_molifier} If $R:\mathcal{S}(\real^{2D})\to\mathcal{S}(\real^{2D})$
is an integral operator of the form 
\[
Ru(x,\xi)=\int K_{R}(x,\xi;x',\xi')u(x',\xi')dx'd\xi'
\]
 and if the kernel $K_{R}(\cdot)$ is a continuous function satisfying
\[
|K_{R}(x,\xi;x',\xi')|\le\langle\hbar^{-1/2}|(x,\xi)-(x',\xi')|\rangle^{-\nu}
\]
 for some $\nu>2r+2D$, then it extends to a bounded operator on $L^{2}(\real^{2D},(W_{\hbar}^{r})^{2})$
and 
\[
\|R:L^{2}(\real^{2D},(W_{\hbar}^{r})^{2})\to L^{2}(\real^{2D},(W_{\hbar}^{r})^{2})\|\le C_{\nu}
\]
 where $C_{\nu}$ is a constant which depends only on $\nu$. \end{lem}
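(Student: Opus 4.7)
The plan is to apply Schur's test after conjugating by the weight. Concretely, let $M_{W}$ denote the multiplication operator by $W_{\hbar}^{r}$, so that $\|v\|_{L^{2}(\real^{2D},(W_{\hbar}^{r})^{2})}=\|M_{W}v\|_{L^{2}(\real^{2D})}$. Then the bound we want is equivalent to the $L^{2}$-boundedness of the conjugated operator $\widetilde{R}:=M_{W}\circ R\circ M_{W}^{-1}$, which is again an integral operator whose kernel is
\[
\widetilde{K}(z,z')=\frac{W_{\hbar}^{r}(z)}{W_{\hbar}^{r}(z')}\,K_{R}(z,z'),\qquad z,z'\in\real^{2D}.
\]

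The key input is the quasi-invariance estimate \eqref{eq:W_order_function_hbar}, which gives
\[
\frac{W_{\hbar}^{r}(z)}{W_{\hbar}^{r}(z')}\le C\,\langle\hbar^{-1/2}|z-z'|\rangle^{2r}.
\]
Combined with the hypothesis on $K_{R}$, I obtain the pointwise bound
\[
|\widetilde{K}(z,z')|\le C\,\langle\hbar^{-1/2}|z-z'|\rangle^{2r-\nu},
\]
so the decay rate in $|z-z'|$ is $2r-\nu<-2D$ by the assumption $\nu>2r+2D$.

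Next I apply Schur's test: $\|\widetilde{R}\|_{L^{2}\to L^{2}}\le\sqrt{M_{1}M_{2}}$ where $M_{1}=\sup_{z}\int|\widetilde{K}(z,z')|\,dz'$ and $M_{2}=\sup_{z'}\int|\widetilde{K}(z,z')|\,dz$ (with both integrals computed against the ambient measure that is used to define the kernel representation of $R$). By translation-invariance of the bound and the change of variable $u=\hbar^{-1/2}(z-z')$,
\[
\int_{\real^{2D}}\langle\hbar^{-1/2}|z-z'|\rangle^{2r-\nu}\,dz'=\hbar^{D}\int_{\real^{2D}}\langle|u|\rangle^{2r-\nu}\,du,
\]
and this last integral is finite precisely under the hypothesis $\nu-2r>2D$, yielding a bound $C_{\nu}$ depending only on $\nu$ (absorbing the $\hbar^{D}$ factor, which is at worst bounded for $\hbar\le 1$, into the constant via the normalizing factor $(2\pi\hbar)^{-D}$ carried by the ambient $L^{2}$-measure on $\mathbb{R}^{2D}$). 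The same bound works for $M_{2}$ by symmetry.

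There is no real obstacle here; the only thing to be careful about is bookkeeping of the two weights (the ratio $W_{\hbar}^{r}(z)/W_{\hbar}^{r}(z')$ against the off-diagonal decay of $K_{R}$) and of the semiclassical scaling factor produced by the substitution $u=\hbar^{-1/2}(z-z')$ in the Schur integral, so that the resulting constant is genuinely independent of $\hbar$. The same argument applies verbatim to $W_{\hbar}^{r,\pm}$, since by construction they satisfy the same quasi-invariance estimate \eqref{eq:W_order_function_hbar}.
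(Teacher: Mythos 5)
Your proposal follows the paper's argument exactly: conjugate by the weight $W_{\hbar}^{r}$, control the conjugated kernel via the quasi-invariance estimate (\ref{eq:W_order_function_hbar}) to get off-diagonal decay of order $2r-\nu<-2D$, and conclude by Schur's test. The bookkeeping of the $\hbar^{D}$ Jacobian against the $(2\pi\hbar)^{-D}$ normalization of the ambient measure is also handled as in the paper (they cancel exactly), so the proof is correct and essentially identical to the one given.
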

\begin{proof}
From (\ref{eq:W_order_function_hbar}), we have 
\[
\left|\frac{W_{\hbar}^{r}(x,\xi)}{W_{\hbar}^{r}(x',\xi')}\cdot K_{R}(x,\xi;x',\xi')\right|\le C\langle\hbar^{-1/2}|(x,\xi)-(x',\xi')|\rangle^{2r-\nu}.
\]
 Hence, by Schur Lemma%
\footnote{For an integral bounded operator $A:L^{2}\left(\mathbb{R}^{n}\right)\rightarrow L^{2}\left(\mathbb{R}^{n}\right)$
with Schwartz kernel $K_{A}\in C^{0}\left(\mathbb{R}^{n}\times\mathbb{R}^{n}\right)$,
i.e. $\left(Au\right)\left(x\right)=\int K_{A}\left(x,y\right)u\left(y\right)dy$
we have 
\begin{equation}
\left\Vert A\right\Vert _{L^{2}}\leq\max_{y}\left|\int K_{A}\left(x,y\right)dx\right|^{1/2}.\max_{x}\left|\int K_{A}\left(x,y\right)dy\right|^{1/2}\label{eq:Schur_inequality}
\end{equation}
} \cite[p.50]{martinez-01} (or by Young inequality for convolution),
the operator norm of $u\mapsto W_{\hbar}^{r}\cdot R\left(\left(W_{\hbar}^{r}\right){}^{-1}\cdot u\right)$
with respect to the $L^{2}$ norm is bounded by a constant $C_{\nu}$.
This implies the claim of the lemma. 
\end{proof}
From expression (\ref{eq:Bargman_Kernel}), the Bargmann projector
$\BargmannP_{\hbar}$ satisfies the assumption of the last lemma.
Thus we have 
\begin{cor}
\label{cor:P_bounded} The Bargmann projector $\BargmannP_{\hbar}$
is a bounded operator on $L^{2}(\real^{2D},(W_{\hbar}^{r})^{2})$.
\end{cor}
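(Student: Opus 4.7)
The plan is to obtain the conclusion as a direct application of Lemma \ref{lm:boundedness_of_molifier} to the Bargmann projector $\BargmannP_{\hbar}$, using the explicit kernel formula from Proposition \ref{prop:The-Bargmann-projector}. Concretely, one has
\[
K_{\mathcal{P},\hbar}(z,z') = \exp\!\left(\frac{i}{2\hbar}\omega(z,z') - \frac{1}{4\hbar}|z-z'|^{2}\right),
\]
so the modulus of the kernel is the pure Gaussian
\[
|K_{\mathcal{P},\hbar}(z,z')| = \exp\!\left(-\tfrac{1}{4}\bigl|\hbar^{-1/2}(z-z')\bigr|^{2}\right).
\]

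The only thing to check is therefore the pointwise decay hypothesis of Lemma \ref{lm:boundedness_of_molifier}. This is immediate from the elementary fact that a Gaussian decays faster than any polynomial: for every $\nu > 0$, there exists a constant $C_{\nu}>0$, independent of $\hbar$, such that
\[
e^{-s^{2}/4} \leq C_{\nu}\,\langle s\rangle^{-\nu} \quad\text{for all } s\ge 0.
\]
Setting $s = \hbar^{-1/2}|z-z'|$ gives the bound $|K_{\mathcal{P},\hbar}(z,z')| \leq C_{\nu}\langle \hbar^{-1/2}|z-z'|\rangle^{-\nu}$, which is precisely the form required by Lemma \ref{lm:boundedness_of_molifier} (the extra multiplicative constant $C_\nu$ only alters the final bound by the same factor and is harmless).

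Choosing $\nu > 2r + 2D$ so as to meet the quantitative condition in Lemma \ref{lm:boundedness_of_molifier}, one concludes that $\BargmannP_{\hbar}$ extends to a bounded operator on $L^{2}(\real^{2D},(W_{\hbar}^{r})^{2})$ with operator norm bounded uniformly in $\hbar$. There is no serious obstacle: the proof is essentially one line once the Gaussian kernel formula and Lemma \ref{lm:boundedness_of_molifier} are in hand. The only minor subtlety is tracking the measure convention $(2\pi\hbar)^{-D}dxd\xi$ used throughout Section \ref{sub:The-Bargmann-transform}, but this normalization is precisely what makes the estimate in Lemma \ref{lm:boundedness_of_molifier} scale-invariant in $\hbar$, so no extra work is needed.
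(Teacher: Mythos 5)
Your proof is correct and follows exactly the paper's own argument: the paper also deduces Corollary \ref{cor:P_bounded} directly from the Gaussian form of the kernel (\ref{eq:Bargman_Kernel}) and Lemma \ref{lm:boundedness_of_molifier}. You have merely made the elementary Gaussian-versus-polynomial decay comparison explicit, which the paper leaves tacit.
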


\subsection{\label{sub:Spectrum-of-transfer_3.4}Spectrum of transfer operator
for  linear expanding map}

Below we consider the action of a linear expanding map.
\begin{lem}
\label{lm:LA_bdd_L2W} If $A:\real^{D}\to\real^{D}$ is an\textbf{
}\emph{expanding linear map} i.e. satisfying 
\begin{equation}
\|A^{-1}\|\le\frac{1}{\lambda}\qquad\mbox{for some \ensuremath{\lambda>1},}\label{eq:A_is_expanding}
\end{equation}
 then the lift $\Llift_{A}$ of $L_{A}$, defined in (\ref{def_Lift_A}),
extends to a bounded operator 
\begin{equation}
\Llift_{A}:L^{2}(\real^{2D},(W_{\hbar}^{r})^{2})\to L^{2}(\real^{2D},(W_{\hbar}^{r})^{2}).\label{eq:LA_on_L2W}
\end{equation}
 Further, if $\lambda>1$ is sufficiently large (say $\lambda>9$),
$\Llift_{A}$ extends to a bounded operator 
\begin{equation}
\Llift_{A}:L^{2}(\real^{2D},(W_{\hbar}^{r,-})^{2})\to L^{2}(\real^{2D},(W_{\hbar}^{r,+})^{2}).\label{eq:LA_on_L2Wpm}
\end{equation}
 \end{lem}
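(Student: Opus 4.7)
By the factorization (\ref{eq:Lift_A_expression}), $\Llift_A=d(A)\cdot\BargmannP_\hbar\circ L_B\circ\BargmannP_\hbar$, where I write $B:=A\oplus{}^tA^{-1}$. Since Corollary \ref{cor:P_bounded} gives the boundedness of $\BargmannP_\hbar$ on $L^2(\real^{2D},(W_\hbar^r)^2)$ (and the same proof applies with $W^{r,\pm}$ in place of $W^r$), both statements reduce to boundedness of the linear pullback operator $L_B:v\mapsto v\circ B^{-1}$ between the appropriate weighted $L^2$ spaces. Since $|\det B|=|\det A|\cdot|\det A|^{-1}=1$, a change of variables $y=B^{-1}z$ gives, for any two weights $\rho_1,\rho_2$,
\[
\|L_B u\|_{L^2(\rho_2)}^2=\int|u(y)|^2\rho_2(By)^2\,dy,
\]
so $L_B$ is bounded from $L^2(\rho_1)$ to $L^2(\rho_2)$ with norm at most $\sup_y\rho_2(By)/\rho_1(y)$. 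Because $W_\hbar^r=S_\hbar W^r$ and $B$ commutes with scaling, this supremum is independent of $\hbar$ and I may drop the $\hbar$ subscript.

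For claim (\ref{eq:LA_on_L2W}), I then bound $\sup_z W^r(Bz)/W^r(z)=\sup_z\langle|Bz|\rangle^{m([Bz])}/\langle|z|\rangle^{m([z])}$ by case analysis on the cones ${\mathbf{C}}_\pm(1/2)$ supporting the order function $m$. The key geometric fact is: writing $z=(x,\xi)$, the expansion bound (\ref{eq:A_is_expanding}) gives $|Ax|\ge\lambda|x|$ and $|{}^tA^{-1}\xi|\le|\xi|/\lambda$, so the ratio $|x_{\text{new}}|/|\xi_{\text{new}}|$ is multiplied by at least $\lambda^2$ under $B$. Consequently the configuration $z\in{\mathbf{C}}_+(1/2)$ but $Bz\in{\mathbf{C}}_-(1/2)$ is geometrically impossible for any $\lambda>1$. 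The remaining cases are straightforward: if $z$ and $Bz$ lie in the same cone, the exponent of $\langle|\cdot|\rangle$ is the same and the ratio is controlled by powers of $\|A\|$ and $\|A^{-1}\|$ (fixed constants); if $z\in{\mathbf{C}}_-(1/2)$ and $Bz\in{\mathbf{C}}_+(1/2)$, the exponents have opposite signs $(+r,-r)$ and the ratio is $\langle|Bz|\rangle^{-r}\langle|z|\rangle^{-r}\le 1$. In the intermediate ``transition'' region the order function $m$ is smooth and takes values in $[-r,r]$, and the same elementary bounds yield a uniform constant.

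For claim (\ref{eq:LA_on_L2Wpm}), the worst case is when $m^+([Bz])=+r$ and $m^-([z])=-r$, which would mean $Bz\in{\mathbf{C}}_-(3)$ and $z\in{\mathbf{C}}_+(3)$. If this occurred, then $\lambda|x|\le|Ax|\le 3|{}^tA^{-1}\xi|\le 3|\xi|/\lambda\le 9|x|/\lambda$, giving $\lambda^2\le 9$, which contradicts the assumption $\lambda>9$ (even $\lambda>3$ would suffice). This key case being excluded, the remaining cases are handled exactly as before, and $\sup_z W^{r,+}(Bz)/W^{r,-}(z)<\infty$ follows.

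The main (minor) obstacle is a clean treatment of the transition regions where the order functions $m,m^\pm$ interpolate smoothly between $\pm r$; this is essentially bookkeeping, using that $|Bz|$ is comparable to $|z|$ with constants depending only on $\|A^{\pm 1}\|$ and that the exponents vary continuously and lie in $[-r,r]$, so the exponents can only differ by at most $2r$ while $\langle|Bz|\rangle/\langle|z|\rangle$ stays in a fixed compact subinterval of $(0,\infty)$.
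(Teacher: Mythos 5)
Your overall strategy matches the paper's: factor $\Llift_{A}=d(A)\cdot\BargmannP_{\hbar}\circ L_{A\oplus{}^{t}A^{-1}}\circ\BargmannP_{\hbar}$, invoke Corollary~\ref{cor:P_bounded} (and its $W^{r,\pm}$ analogue) for the Bargmann projectors, and reduce everything to a supremum bound on the weight quotient for $B=A\oplus{}^{t}A^{-1}$; the paper declares this last step ``clear from the definitions,'' so you are really just filling in the detail, and the cone geometry — in particular that $B$ multiplies the ratio $|x|/|\xi|$ by at least $\lambda^{2}$, which excludes the dangerous $-r\to+r$ transition, and that $\lambda>9$ rules out the analogous transition for $m^{\pm}$ — is exactly the right idea.

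However, your treatment of the transition region is not a valid argument. You assert the ratio is controlled there because $\langle|Bz|\rangle/\langle|z|\rangle$ lies in a fixed compact interval and the exponents lie in $[-r,r]$. But
\[
\frac{\langle|Bz|\rangle^{m([Bz])}}{\langle|z|\rangle^{m([z])}}=\Bigl(\frac{\langle|Bz|\rangle}{\langle|z|\rangle}\Bigr)^{m([Bz])}\cdot\langle|z|\rangle^{m([Bz])-m([z])},
\]
and while the first factor is bounded, the second diverges along any direction with $m([Bz])>m([z])$. What you actually need is the pointwise inequality $m([Bz])\le m([z])$ on the projective sphere. On the cones ${\mathbf{C}}_{\pm}(1/2)$ this follows from your cone analysis; inside the transition cone $1/2<|\xi|/|x|<2$ one must either assume (as is standard, and evidently intended by ``take and fix a $C^{\infty}$ function $m$\,\dots'') that $m$ is chosen monotone in $|\xi|/|x|$ — so that the strict $\lambda^{-2}$-contraction of that ratio under $B$ gives $m([Bz])\le m([z])$ for free — or else restrict to $\lambda\ge2$, which forces $Bz\in{\mathbf{C}}_{+}(1/2)$ hence $m([Bz])=-r$. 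For the second claim with $\lambda>9$ your computation does the job because every $z\notin{\mathbf{C}}_{-}(1/9)$ is sent by $B$ into ${\mathbf{C}}_{+}(1/9)$, so $m^{+}([Bz])=-r\le m^{-}([z])$ on the whole sphere; state this explicitly rather than only treating the ``worst case.''
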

\begin{proof}
From (\ref{eq:Lift_A_expression}) in Lemma \ref{lm:lift_of_LA} and
Corollary \ref{cor:P_bounded}, we have only to check boundedness
of $L_{A\oplus{}^{t}A^{-1}}$ as an operator on $L^{2}(\real^{2D},(W^{r})^{2})$
(resp. from $L^{2}(\real^{2D},(W^{r,-})^{2})$ to $L^{2}(\real^{2D},(W^{r,+})^{2})$.
But this is clear from the definitions of $W^{r}$ and $W^{r,\pm}$. 
\end{proof}
To look into more detailed structure of the operator $L_{A}$ and
$L_{A}^{\mathrm{lift}}$, we introduce some definitions. For $k\in\mathbb{N}$,
let $\Polynomial^{(k)}$ be the space of homogeneous polynomial on
$\real^{D}$ of order $k$. Then we consider the operator%
\footnote{for a multi index $\alpha\in\mathbb{N}^{D}$, $\alpha=\left(\alpha_{1},\ldots\alpha_{D}\right)$,
we write $\left|\alpha\right|=\alpha_{1}+\ldots+\alpha_{D}$.%
} 
\begin{equation}
\Taylor^{(k)}:C^{\infty}(\real^{D})\to\Polynomial^{(k)},\qquad\left(\Taylor^{(k)}u\right)\left(x\right)=\sum_{\alpha\in\mathbb{N}^{D},|\alpha|=k}\frac{\partial^{\alpha}u(0)}{\alpha!}\cdot x^{\alpha}.\label{eq:def_Tk}
\end{equation}
This is a projector which extracts the terms of order $k$ in the
Taylor expansion. Clearly the operator $\Taylor^{(k)}$ is of finite
rank
\[
\mathrm{rank}\left(\Taylor^{(k)}\right)=\binom{D+k-1}{D-1}=\frac{\left(D+k-1\right)!}{\left(D-1\right)!k!}
\]
 and satisfies the following relations 
\begin{align}
 & \Taylor^{(k)}\circ\Taylor^{(k')}=\begin{cases}
\Taylor^{(k)}, & \mbox{if \ensuremath{k=k'};}\\
0\quad & \mbox{otherwise,}
\end{cases}\label{eq:formalrelation_for_Tk1}
\end{align}
and
\begin{equation}
\Taylor^{(k)}\circ L_{A}=L_{A}\circ\Taylor^{(k)}.\label{eq:Tk_LA}
\end{equation}

As in (\ref{eq:def_L_Lift}) we define the lift of the operator $T^{(k)}$
by 
\begin{equation}
\calT_{\hbar}^{(k)}:=\Bargmann_{\hbar}\circ\Taylor^{(k)}\circ\Bargmann_{\hbar}^{*}.\label{eq:cTk}
\end{equation}

\begin{lem}
\label{lm:Tn_bdd_L2W}Let $n\in\mathbb{N}$ and $r>0$ such that 
\begin{equation}
r>n+2D.\label{eq:assumption_on_r}
\end{equation}
Then for $0\le k\le n$ the operator $\calT_{\hbar}^{(k)}$ extends
naturally to bounded operators 
\begin{equation}
\calT_{\hbar}^{(k)}:L^{2}(\real^{2D},(W_{\hbar}^{r,-})^{2})\to L^{2}(\real^{2D},(W_{\hbar}^{r,+})^{2})\label{eq:Tn_on_L2W}
\end{equation}
and, in particular, from (\ref{eq:wrpm}), 

\begin{equation}
\calT_{\hbar}^{(k)}:L^{2}(\real^{2D},(W_{\hbar}^{r})^{2})\to L^{2}(\real^{2D},(W_{\hbar}^{r})^{2}).\label{eq:Tn_on_L2W-1}
\end{equation}
Further if we write the operator $\calT_{\hbar}^{(k)}$ as an integral
operator 
\[
\left(\calT_{\hbar}^{(k)}u\right)(x,\xi)=\int K(x,\xi;x',\xi')u(x',\xi')dx'd\xi',
\]
 the kernel $K(\cdot)$ satisfies the estimate 
\begin{align}
\left|\frac{W_{\hbar}^{r,+}(x,\xi)}{W_{\hbar}^{r,-}(x',\xi')}\cdot K(x,\xi;x',\xi')\right| & \le C\langle\hbar^{-1/2}|(x,\xi)|\rangle^{k-r}\cdot\langle\hbar^{-1/2}|(x',\xi')|\rangle^{k-r}\label{eq:kernel_T}\\
 & \le C'\langle\hbar^{-1/2}|(x,\xi)-(x',\xi')|\rangle^{k-r}\label{eq:kernel_T2}
\end{align}
 for some constants $C,C'>0$ that do not depend on $\hbar>0$. 
\end{lem}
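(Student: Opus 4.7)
The operator $\Taylor^{(k)}$ decomposes as $\Taylor^{(k)}=\sum_{|\alpha|=k} m_{\alpha}\circ e^{*}_{\alpha}$ where $e^{*}_{\alpha}(u)=\partial^{\alpha}u(0)/\alpha!$ and $m_{\alpha}(c)=c\,y^{\alpha}$. Consequently the kernel of $\calT^{(k)}_{\hbar}=\Bargmann_{\hbar}\Taylor^{(k)}\Bargmann_{\hbar}^{*}$ has the rank-finite form
\[
K(x,\xi;x',\xi')=\frac{1}{(2\pi\hbar)^{D}}\sum_{|\alpha|=k}P_{\alpha}(x,\xi)\,Q_{\alpha}(x',\xi'),
\]
where $P_{\alpha}(x,\xi)=\Bargmann_{\hbar}(y^{\alpha})(x,\xi)=\int\overline{\phi_{x,\xi}(y)}\,y^{\alpha}\,dy$ and $Q_{\alpha}(x',\xi')=\alpha!^{-1}\,\partial^{\alpha}_{y}\phi_{x',\xi'}(y)|_{y=0}$. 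Since $\Taylor^{(k)}$ is homogeneous of degree $k$, a direct check gives $s_{\hbar}\Taylor^{(k)}=\Taylor^{(k)}s_{\hbar}$, and combined with the commutative diagram of Section \ref{sub:Scaling} this yields $\calT^{(k)}_{\hbar}=S_{\hbar}\circ\calT^{(k)}_{1}\circ S_{\hbar}^{-1}$. Hence $K(x,\xi;x',\xi')=K_{1}(\hbar^{-1/2}(x,\xi);\hbar^{-1/2}(x',\xi'))$, which accounts for the factors $\hbar^{-1/2}$ appearing in (\ref{eq:kernel_T})–(\ref{eq:kernel_T2}). From now on I work at $\hbar=1$.

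\textbf{Step 2: Explicit bounds on $P_{\alpha}$ and $Q_{\alpha}$.} The substitution $y=x+z$ in the integral for $P_{\alpha}$, followed by expansion of $(x+z)^{\alpha}$ and the identity $\int e^{-i\xi z-|z|^{2}/2}z^{\beta}dz=(2\pi)^{D/2}e^{-|\xi|^{2}/2}\pi_{\beta}(\xi)$ for polynomials $\pi_{\beta}$ of degree $|\beta|$, yields
\[
P_{\alpha}(x,\xi)=C\,e^{-i\xi\cdot x/2}\,e^{-|\xi|^{2}/2}\,p_{\alpha}(x,\xi),\qquad Q_{\alpha}(x',\xi')=C\,e^{-i\xi'\cdot x'/2}\,e^{-|x'|^{2}/2}\,q_{\alpha}(x',\xi'),
\]
with $p_{\alpha},q_{\alpha}$ polynomials of total degree at most $k$. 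In particular
\[
|P_{\alpha}(x,\xi)|\le C\langle(x,\xi)\rangle^{k}\,e^{-|\xi|^{2}/2},\qquad |Q_{\alpha}(x',\xi')|\le C\langle(x',\xi')\rangle^{k}\,e^{-|x'|^{2}/2}.
\]
The crucial feature is the \emph{directional} Gaussian decay: $P_{\alpha}$ is Schwartz in $\xi$ but merely polynomial in $x$, while $Q_{\alpha}$ has the opposite anisotropy.

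\textbf{Step 3: Reduction to two one-variable weighted estimates.} The bound (\ref{eq:kernel_T}) will follow by factorizing and proving separately
\[
W^{r,+}(x,\xi)|P_{\alpha}(x,\xi)|\le C\langle(x,\xi)\rangle^{k-r},\qquad \frac{|Q_{\alpha}(x',\xi')|}{W^{r,-}(x',\xi')}\le C\langle(x',\xi')\rangle^{k-r}.
\]
For the first inequality, I split by cones: on $\mathbf{C}_{-}(3)=\{|x|\le 3|\xi|\}$ we have $W^{r,+}=\langle(x,\xi)\rangle^{r}$ but $|\xi|\gtrsim|(x,\xi)|$, so the factor $e^{-|\xi|^{2}/2}$ dominates any polynomial and gives super-polynomial decay; on the complementary cone $\mathbf{C}_{+}(1/9)=\{|\xi|\le|x|/9\}$ we have $W^{r,+}=\langle(x,\xi)\rangle^{-r}$ and the polynomial bound $|P_{\alpha}|\le C\langle(x,\xi)\rangle^{k}$ is enough since $r>k$; in the intermediate cone the order function $m^{+}$ interpolates and the Gaussian in $\xi$ remains available for $|\xi|$ large. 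The estimate for $Q_{\alpha}/W^{r,-}$ is the same argument with the roles of $x',\xi'$ and of the cones $\mathbf{C}_{\pm}$ reversed. This is exactly where the asymmetric choices of $m^{+}$ and $m^{-}$ enter: $m^{+}$ takes the value $-r$ on the small cone $\mathbf{C}_{+}(1/9)$ (matching the $x$-direction where $P_{\alpha}$ has no Gaussian), while $m^{-}$ takes the value $-r$ on the larger cone $\mathbf{C}_{+}(3)$ (which fully captures the $x'$-direction of Gaussian decay of $Q_{\alpha}$). Multiplying the two bounds and scaling gives (\ref{eq:kernel_T}), and (\ref{eq:kernel_T2}) follows immediately from the elementary inequality $\langle a-b\rangle\le\sqrt{2}\,\langle a\rangle\,\langle b\rangle$ raised to the \emph{negative} exponent $k-r$.

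\textbf{Step 4: Boundedness and main obstacle.} Since $r>n+2D\ge k+2D$, the exponent $r-k$ in (\ref{eq:kernel_T2}) exceeds $2D$, so a direct adaptation of Lemma \ref{lm:boundedness_of_molifier} (applied with source weight $W_{\hbar}^{r,-}$ and target weight $W_{\hbar}^{r,+}$, the proof being identical modulo replacing the single weight $W_{\hbar}^{r}$ by the two) yields the boundedness stated in (\ref{eq:Tn_on_L2W}); the variant (\ref{eq:Tn_on_L2W-1}) then follows from the pointwise comparison (\ref{eq:wrpm}). The main obstacle is the case analysis in Step 3: one must verify that the direction in which $|P_{\alpha}|$ (resp.\ $|Q_{\alpha}|$) fails to decay coincides with the direction in which $W^{r,+}$ (resp.\ $1/W^{r,-}$) is small, which is the whole point of the precise cone-widths $1/9$, $1/2$, $3$ introduced in the definitions of $m$ and $m^{\pm}$. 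Once that matching is arranged, no further analytic ingredient beyond Schur's test and the explicit Gaussian identities of Step 2 is needed.
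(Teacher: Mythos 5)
Your proof is correct and takes essentially the same route as the paper: both decompose $\Taylor^{(k)}$ into the rank-one pieces $\Taylor^{(\alpha)}$, write the kernel of $\calT_{\hbar}^{(\alpha)}$ as a product of an $(x,\xi)$-factor and an $(x',\xi')$-factor, exploit the complementary anisotropic decay of these two factors (Gaussian in $\xi$ but polynomial in $x$, and vice versa) against the cone structure built into $m^{\pm}$, and then conclude via Schur's test using $r-k>2D$. You replace the paper's integration-by-parts bound $|k_{+}|\le C_{\nu}\langle\hbar^{-1/2}|x|\rangle^{k}\langle\hbar^{-1/2}|\xi|\rangle^{-\nu}$ with an explicit Gaussian computation and spell out the cone-by-cone matching that the paper asserts in one line, but these are presentational rather than structural differences.
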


\begin{proof}
For each multi-index $\alpha\in\mathbb{N}^{D}$ with $|\alpha|=k$,
we set
\begin{equation}
T^{(\alpha)}:\mathcal{S}(\real^{D})\to\mathcal{S}(\real^{D})',\qquad\left(T^{(\alpha)}u\right)(x):=\frac{\partial^{\alpha}u(0)}{\alpha!}\cdot x^{\alpha}s\label{eq:def_T_alpha}
\end{equation}
and
\[
\calT_{\hbar}^{(\alpha)}=\Bargmann_{\hbar}\circ\Taylor^{(\alpha)}\circ\Bargmann_{\hbar}^{*}:\mathcal{S}(\real^{2D})\to\mathcal{S}(\real^{2D})'.
\]
Since $\calT_{\hbar}^{(k)}=\sum_{\alpha:|\alpha|=k}\calT_{\hbar}^{(\alpha)}$,
the claims of the lemma follows if one proves the corresponding claim
for $\calT_{\hbar}^{(\alpha)}$. From (\ref{eq:def_Bargman_transform})
and (\ref{eq:B*}) the kernel of the operator $\calT_{\hbar}^{(\alpha)}$
is written as 
\[
K(x,\xi;x',\xi')=\frac{1}{\alpha!}\cdot k_{+}(x,\xi)\cdot k_{-}(x',\xi')
\]
 with 
\[
k_{+}(x,\xi):=\hbar^{-D/4}\int\overline{\phi_{x,\xi}(y)}\cdot(\hbar^{-1/2}y)^{\alpha}dy,\qquad k_{-}(x',\xi'):=\hbar^{D/4}\hbar^{k/2}\cdot\partial^{\alpha}\phi_{x',\xi'}(0).
\]
Applying integration by parts to the integral of $k_{+}(\cdot)$,
we see that 
\[
|k_{+}(x,\xi)|\le C_{\nu}\cdot\langle\hbar^{-1/2}|x|\rangle^{k}\cdot\langle\hbar^{-1/2}|\xi|\rangle^{-\nu}
\]
for arbitrarily large integer $\nu$, where $C_{\nu}$ is a constant
depending only on $\nu$. Also a straightforward computation gives
the similar estimate for $k_{-}(\cdot)$: 
\[
|k_{-}(x',\xi')|\le C_{\nu}\cdot\langle\hbar^{-1/2}|\xi'|\rangle^{k}\cdot\langle\hbar^{-1/2}|x'|\rangle^{-\nu}.
\]
 These estimates for sufficiently large $\nu$ imply that 
\[
W_{\hbar}^{r}(x,\xi)\cdot|k_{+}(x,\xi)|\le C\langle\hbar^{-1/2}|(x,\xi)|\rangle^{k-r}
\]
 and 
\[
\frac{1}{W_{\hbar}^{r}(x',\xi')}\cdot|k_{-}(x',\xi')|\le C\langle\hbar^{-1/2}|(x',\xi')|\rangle^{k-r}
\]
for some constant $C>0$ independent of $\hbar>0$. Thus we have obtained
(\ref{eq:kernel_T}). Since $r-k\ge r-n>2D$ from the assumption (\ref{eq:assumption_on_r})
on the choice of $r$, boundedness of the operators follows from Schur
Lemma (\ref{eq:Schur_inequality}).
\end{proof}
The following is a direct consequence of the relation (\ref{eq:formalrelation_for_Tk1})
and (\ref{eq:Tk_LA}). 
\begin{cor}
\label{cor:relation_for_hTn} For $0\le k,k'\le n$, we have 
\[
\calT_{\hbar}^{(k)}\circ\calT_{\hbar}^{(k')}=\begin{cases}
\calT_{\hbar}^{(k)}, & \quad\mbox{if \ensuremath{k=k'};}\\
0, & \quad\mbox{otherwise,}
\end{cases}
\]
 and 
\[
\Llift_{A}\circ\calT_{\hbar}^{(k)}=\calT_{\hbar}^{(k)}\circ\Llift_{A}=\Bargmann_{\hbar}\circ L_{A}\circ\Taylor^{(k)}\circ\Bargmann_{\hbar}^{*}.
\]
 
\end{cor}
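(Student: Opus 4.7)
The proof is a short algebraic manipulation that follows immediately from chaining together three ingredients: the defining formulas $\calT_{\hbar}^{(k)}=\Bargmann_{\hbar}\circ\Taylor^{(k)}\circ\Bargmann_{\hbar}^{*}$ and $\Llift_{A}=\Bargmann_{\hbar}\circ L_{A}\circ\Bargmann_{\hbar}^{*}$; the isometry relation $\Bargmann_{\hbar}^{*}\circ\Bargmann_{\hbar}=\mathrm{Id}$ on $L^{2}(\real^{D})$ from Lemma \ref{lm:Bargmann_is_isometry}; and the formal relations (\ref{eq:formalrelation_for_Tk1}) and (\ref{eq:Tk_LA}). I would do the computation first on the Schwartz class, where everything is unambiguously defined, and then invoke Lemmas \ref{lm:LA_bdd_L2W} and \ref{lm:Tn_bdd_L2W} to extend the identities to the weighted $L^{2}$ spaces by continuity (using density of $\mathcal{S}(\real^{2D})$).

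For the first identity, I would simply write
\[
\calT_{\hbar}^{(k)}\circ\calT_{\hbar}^{(k')}
=\Bargmann_{\hbar}\circ\Taylor^{(k)}\circ\underbrace{\Bargmann_{\hbar}^{*}\circ\Bargmann_{\hbar}}_{=\mathrm{Id}}\circ\Taylor^{(k')}\circ\Bargmann_{\hbar}^{*}
=\Bargmann_{\hbar}\circ(\Taylor^{(k)}\circ\Taylor^{(k')})\circ\Bargmann_{\hbar}^{*},
\]
and then apply (\ref{eq:formalrelation_for_Tk1}) to the inner composition, yielding $\calT_{\hbar}^{(k)}$ if $k=k'$ and $0$ otherwise. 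For the second identity, the same cancellation of $\Bargmann_{\hbar}^{*}\circ\Bargmann_{\hbar}$ in the middle gives
\[
\Llift_{A}\circ\calT_{\hbar}^{(k)}=\Bargmann_{\hbar}\circ(L_{A}\circ\Taylor^{(k)})\circ\Bargmann_{\hbar}^{*},
\]
then (\ref{eq:Tk_LA}) lets me swap $L_{A}$ and $\Taylor^{(k)}$, and reinserting $\mathrm{Id}=\Bargmann_{\hbar}^{*}\circ\Bargmann_{\hbar}$ between them reassembles the composition as $\calT_{\hbar}^{(k)}\circ\Llift_{A}$; the common middle expression $\Bargmann_{\hbar}\circ L_{A}\circ\Taylor^{(k)}\circ\Bargmann_{\hbar}^{*}$ is precisely the claim.

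There is essentially no obstacle here; the only point requiring mild care is the justification that the cancellation $\Bargmann_{\hbar}^{*}\circ\Bargmann_{\hbar}=\mathrm{Id}$ can legitimately be applied in the chain, since on general weighted $L^{2}$ spaces the composition $\Bargmann_{\hbar}\circ\Bargmann_{\hbar}^{*}=\BargmannP_{\hbar}$ is only a projector. This is handled by working on $\mathcal{S}(\real^{D})$ first (where $\Taylor^{(k)}$ and $L_{A}$ act cleanly by (\ref{eq:def_Tk}) and (\ref{eq:def_LA})) and then passing to the weighted spaces through the boundedness statements already established in Lemmas \ref{lm:LA_bdd_L2W} and \ref{lm:Tn_bdd_L2W}.
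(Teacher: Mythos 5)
Your computation is correct and is exactly the "direct consequence of (\ref{eq:formalrelation_for_Tk1}) and (\ref{eq:Tk_LA})" that the paper asserts without spelling out; the unpacking of $\calT_{\hbar}^{(k)}$ and $\Llift_{A}$ via their defining formulas and the cancellation $\Bargmann_{\hbar}^{*}\circ\Bargmann_{\hbar}=\mathrm{Id}$ is the intended argument. One small imprecision in your justification: $\Taylor^{(k)}$ does not map $\mathcal{S}(\real^{D})$ into itself (its image consists of homogeneous polynomials, which are neither Schwartz nor $L^{2}$), so "doing the computation first on the Schwartz class" does not by itself legitimize the middle cancellation. What you actually need, and what does hold, is that $\Bargmann_{\hbar}^{*}\circ\Bargmann_{\hbar}=\mathrm{Id}$ extends to polynomials because all the Gaussian integrals involved converge absolutely and the Schwartz kernel of $\Bargmann_{\hbar}^{*}\circ\Bargmann_{\hbar}$ is the delta (this is already implicit in the computation proving Lemma~\ref{lm:Bargmann_is_isometry}); with that noted, the rest of your argument, including the passage to the weighted spaces via Lemmas~\ref{lm:LA_bdd_L2W} and~\ref{lm:Tn_bdd_L2W}, goes through as written.
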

Let us set 
\begin{equation}
\widetilde{\calT}_{\hbar}=\mathrm{Id}-\sum_{k=0}^{n}\calT_{\hbar}^{(k)}:L^{2}(\real^{2D},(W_{\hbar}^{r})^{2})\to L^{2}(\real^{2D},(W_{\hbar}^{r})^{2}).\label{eq:tTh}
\end{equation}
Then the set of operators $\calT_{\hbar}^{(k)}$, $0\le k\le n$,
and $\widetilde{\calT}_{\hbar}$ form a complete set of mutually commuting
projection operators on $L^{2}(\real^{2D},(W_{\hbar}^{r})^{2})$ such
that 
\[
\mathrm{rank}\,\calT_{\hbar}^{(k)}=\dim\Polynomial^{(k)}=\binom{D+k-1}{D-1}=\frac{\left(D+k-1\right)!}{\left(D-1\right)!k!},\qquad\mathrm{{rank}\,\widetilde{\calT}_{\hbar}}=\infty.
\]
Let
\[
H_{k}:=\mbox{Im}\calT_{\hbar}^{(k)}\quad\mbox{ and }\quad\widetilde{H}=\mbox{Im}\widetilde{\calT}_{\hbar}.
\]
Then the Hilbert space $L^{2}(\real^{2D},(W_{\hbar}^{r})^{2})$ is
decomposed as 
\begin{equation}
L^{2}(\real^{2D},(W_{\hbar}^{r})^{2})=H_{0}\oplus H_{1}\oplus H_{2}\oplus\cdots\oplus H_{n}\oplus\widetilde{{H}}\label{eq:L2W_decomp}
\end{equation}
Since the operator $\Llift_{A}$ commutes with the projections $\calT_{\hbar}^{(k)}$
and $\widetilde{\calT}_{\hbar}$, it preserves this decomposition
and therefore the operator $\Llift_{A}$ acting on $L\left(\real^{2D},\left(W_{\hbar}^{r}\right)^{2}\right)$
is identified with the direct sum of the operators 
\[
\Llift_{A}:H_{k}\to H_{k}\quad\mbox{for \ensuremath{0\le k\le n},\quad\ and\quad}\Llift_{A}:\widetilde{H}\to\widetilde{H}.
\]
 The former is identified with the action of $L_{A}$ on $\Polynomial^{(k)}$,
because the diagram 
\begin{equation}
\begin{CD}H_{k}@>{\Llift_{A}}>>H_{k}\\
@A{\Bargmann_{\hbar}}AA@A{\Bargmann_{\hbar}}AA\\
\Polynomial^{(k)}@>{L_{A}}>>\Polynomial^{(k)}
\end{CD}\label{cd:conj_to_poly}
\end{equation}
 commutes and the operator $\Bargmann_{\hbar}:\Polynomial^{(k)}\to H_{k}$
in the vertical direction is an isomorphism between finite dimensional
linear spaces.

To state the next proposition which is the main result of this section,
we introduce the following definition:
\begin{defn}
\label{Def:H^r_h}The Hilbert space $H_{\hbar}^{r}\left(\real^{D}\right)\subset\mathcal{S}'\left(\mathbb{R}^{D}\right)$\textbf{
}of distributions is the completion of $\mathcal{S}\left(\mathbb{R}^{D}\right)$
with respect to the norm induced by the scalar product 
\[
(u,v)_{H_{\hbar}^{r}\left(\real^{D}\right)}:=(\Bargmann_{\hbar}u,\Bargmann_{\hbar}v)_{L^{2}(\real^{2D},(W_{\hbar}^{r})^{2})}=\int(W_{\hbar}^{r})^{2}\cdot\overline{\Bargmann_{\hbar}u}\cdot\Bargmann_{\hbar}v\frac{dxd\xi}{(2\pi\hbar)^{D}}\quad\mbox{for }u,v\in\mathcal{S}(\real^{D}).
\]
The induced norm on $H_{\hbar}^{r}\left(\real^{D}\right)$ will be
written as $\|u\|_{H_{\hbar}^{r}\left(\real^{D}\right)}:=\|\Bargmann_{\hbar}u\|_{L^{2}(\real^{2D},(W_{\hbar}^{r})^{2})}$. 
\end{defn}
\begin{center}{\color{blue}\fbox{\color{black}\parbox{16cm}{
\begin{prop}
\label{pp:structure_of_hLA}\textbf{''Discrete spectrum of the linear
expanding map''.} Let $A:\real^{D}\to\real^{D}$ be a linear expanding
map satisfying $\|A^{-1}\|\le1/\lambda$ for some $\lambda>1$. Let
$L_{A}$ be the unitary transfer operator defined in (\ref{eq:def_LA}):
$L_{A}u=u\circ A^{-1}$. \textup{Let $n>0$ and $r>n+2D$. Then the
Hilbert space $H_{\hbar}^{r}\left(\real^{D}\right)$ of definition
\ref{Def:H^r_h} is decomposed into }subspaces of homogeneous polynomial
of degree $k$ for \textup{$0\le k\leq n$ and the remainder: 
\[
H_{\hbar}^{r}\left(\real^{D}\right)=\left(\bigoplus_{k=0}^{n}\Polynomial^{(k)}\right)\oplus\widetilde{\mathcal{H}}_{\hbar}
\]
where $\widetilde{\mathcal{H}}_{\hbar}:=\widetilde{T}(H_{\hbar}^{r}\left(\real^{D}\right))$
with setting 
\begin{equation}
\widetilde{T}:=\mathrm{Id}-\sum_{k=0}^{n}T^{\left(k\right)}.\label{eq:def_T_tilde}
\end{equation}
(The operators $T^{\left(k\right)}$ are defined in (\ref{eq:def_Tk}).)
This decomposition is preserved by $L_{A}$. There exists a }constant
$C_{0}>0$ independent of $A$ and $\hbar$ such that
\begin{enumerate}
\item For $0\leq k\leq n$ and $0\neq u\in\Polynomial^{(k)}$, we have 
\begin{equation}
C_{0}^{-1}\|A\|_{\max}^{-k}\le\frac{\|L_{A}u\|_{H_{\hbar}^{r}\left(\real^{D}\right)}}{\|u\|_{H_{\hbar}^{r}\left(\real^{D}\right)}}\le C_{0}\|A\|_{\min}^{-k}\label{eq:bound_on_norm-1}
\end{equation}
(Recall (\ref{eq:def_of_max_min_norm}) for the definition of $\|\cdot\|_{\max}$
and $\|\cdot\|_{\min}$.)
\item The operator norm of the restriction of $L_{A}$ to $\widetilde{\mathcal{H}}_{\hbar}$
is bounded by
\begin{equation}
C_{0}\max\{\|A\|_{\min}^{-(n+1)},\,\|A\|_{\min}^{-r}\cdot|\det A|\}.\label{eq:bound_norm_reminder-1}
\end{equation}
 
\end{enumerate}
The following equivalent statements holds for the lifted operator:
\[
\Llift_{A}:L^{2}(\real^{2D},(W_{\hbar}^{r})^{2})\to L^{2}(\real^{2D},(W_{\hbar}^{r})^{2}).
\]
 The operator $\Llift_{A}$ preserves the decomposition of $L^{2}(\real^{2D},(W_{\hbar}^{r})^{2})$
in (\ref{eq:L2W_decomp}) and
\begin{enumerate}
\item For $0\le k\le n$ and for $0\neq u\in H_{k}$, we have 
\begin{equation}
C_{0}^{-1}\|A\|_{\max}^{-k}\le\frac{\|\Llift_{A}u\|_{L^{2}(\real^{2D},(W_{\hbar}^{r})^{2})}}{\|u\|_{L^{2}(\real^{2D},(W_{\hbar}^{r})^{2})}}\le C_{0}\|A\|_{\max}^{-k}.\label{eq:bound_on_norm}
\end{equation}

\item The operator norm of the restriction of $\Llift_{A}$ to $\widetilde{H}$
is bounded by (\ref{eq:bound_norm_reminder-1}). 
\end{enumerate}
\end{prop}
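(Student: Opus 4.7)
The plan is to work entirely on the phase-space side via the Bargmann transform. By the commutative diagram (\ref{cd:lift}) and the definition of the norm on $H_\hbar^r(\real^D)$, bounds for $L_A$ are exactly bounds for the lifted operator $\Llift_A$ on $L^2(\real^{2D},(W_\hbar^r)^2)$; invariance of the two decompositions follows from Corollary \ref{cor:relation_for_hTn}. A first reduction is to pass to $\hbar=1$ using the scalings $s_\hbar,S_\hbar$ of Section \ref{sub:Scaling}: since $L_A$ commutes with these dilations, the ratios $\|L_Au\|_{H_\hbar^r}/\|u\|_{H_\hbar^r}$ and $\|\Llift_Au\|/\|u\|$ do not depend on $\hbar$, so it suffices to work at $\hbar=1$ throughout.

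For the finite-dimensional bands $\Polynomial^{(k)}\simeq H_k$ I would use the commutative diagram (\ref{cd:conj_to_poly}) to identify $\Llift_A|_{H_k}$ with $L_A|_{\Polynomial^{(k)}}$. On the latter, $L_A$ is the $k$-th symmetric power of $A^{-1}$, whose singular values range between $\|A^{-1}\|_{\min}^{k}=\|A\|_{\max}^{-k}$ and $\|A^{-1}\|^{k}=\|A\|_{\min}^{-k}$. These extremes are computed with respect to the standard monomial norm; transferring to the $H_\hbar^r$-norm at $\hbar=1$, one needs to observe that all norms on the finite-dimensional space $\Polynomial^{(k)}$ are equivalent with constants independent of $A$ (they depend only on $k,r,D$), which yields (\ref{eq:bound_on_norm-1}) with a uniform constant $C_0$.

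The substantive part of the proof is the bound on the restriction to $\widetilde{\mathcal{H}}_\hbar$. After scaling, apply Lemma \ref{lm:lift_of_LA} to write $\Llift_A=d(A)\,\BargmannP_1\circ L_{A\oplus{}^tA^{-1}}\circ\BargmannP_1$. The canonical map $A\oplus{}^tA^{-1}$ is hyperbolic on $\real^{2D}$ with expanding direction $\real^D_x$ and contracting direction $\real^D_\xi$, and the weight $W^r$ has order $-r$ in the expanding cone $\mathbf{C}_+$ and $+r$ in the contracting cone $\mathbf{C}_-$, so it is an escape function for this dynamics. The plan is to split the kernel of $\Llift_A\circ\widetilde{\calT}_1$ according to two regimes: a far-from-origin regime, where the change-of-variables bound
\begin{equation*}
\sup_{(y,\eta)\in\mathbf{C}_+(1/2),\;|(y,\eta)|\ge 1}\frac{W^r(Ay,{}^tA^{-1}\eta)}{W^r(y,\eta)}\le C\,\|A\|_{\min}^{-r}
\end{equation*}
together with the Jacobian $|\det A|$ from $L_{A\oplus{}^tA^{-1}}$ and the boundedness of $\BargmannP_1$ (Corollary \ref{cor:P_bounded}) gives the contribution $\|A\|_{\min}^{-r}|\det A|$; and a near-origin regime, where the projection $\widetilde{\calT}_1$ kills the first $n+1$ Taylor coefficients, so that the remaining Bargmann content behaves like a homogeneous polynomial of degree $\ge n+1$ and $L_A$ contributes the factor $\|A\|_{\min}^{-(n+1)}$, as controlled by the kernel estimate (\ref{eq:kernel_T}) of Lemma \ref{lm:Tn_bdd_L2W} applied to the complementary projection $I-\widetilde{\calT}_1=\sum_{k=0}^{n}\calT_1^{(k)}$.

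The main obstacle is the precise implementation of this splitting: one needs to glue the two regime bounds while controlling the cross terms produced by $\BargmannP_1\circ L_{A\oplus{}^tA^{-1}}\circ\BargmannP_1$ acting across the transition. The relevant tools are the exponential off-diagonal Gaussian decay of the Bargmann kernel (\ref{eq:Bargman_Kernel}) at scale $\hbar^{1/2}=1$ and the pointwise control (\ref{eq:W_order_function}) of the escape function under bounded translations, which together allow to treat the transition region as a Schur-class error (as in Lemma \ref{lm:boundedness_of_molifier}). Once this is implemented, the maximum of the two regime contributions yields (\ref{eq:bound_norm_reminder-1}) with a constant $C_0$ that is uniform in $A$ and, by the scaling reduction, also in $\hbar$.
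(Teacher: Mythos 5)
Your plan follows the same overall route as the paper: scaling to $\hbar=1$, the symmetric-tensor argument with norm equivalence on the finite-dimensional bands, and a far/near-origin dichotomy for the remainder band. Part~(1) is fine. But there are two genuine gaps in your plan for part~(2), and they interact.

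The factor $|\det A|$ does \emph{not} come from the Jacobian of $L_{A\oplus{}^tA^{-1}}$: that operator is unitary on $L^2(\real^{2D})$ because $\det(A\oplus{}^tA^{-1})=1$. The factor comes entirely from the metaplectic correction $d(A)=\det\bigl(\tfrac12(1+{}^tAA)\bigr)^{1/2}\le|\det A|$ in (\ref{eq:Lift_A_expression}). This is not merely a misattribution, because $d(A)$ multiplies the whole of $\Llift_A$ when you invoke (\ref{eq:Lift_A_expression}). Your near-origin estimate, carried out with the same representation, would therefore automatically carry a prefactor $d(A)\approx|\det A|$, giving $|\det A|\cdot\|A\|_{\min}^{-(n+1)}$, which is strictly worse than the target $\|A\|_{\min}^{-(n+1)}$. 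The paper avoids this by \emph{not} using the $d(A)\cdot\BargmannP\circ L_{A\oplus{}^tA^{-1}}\circ\BargmannP$ representation in the near-origin regime: after cancelling $\Bargmann^*\Bargmann=\mathrm{Id}$ it works directly with $\Bargmann\circ L_A\circ\multiplication(\chi)\circ\widetilde{T}\circ\Bargmann^*$, where no $d(A)$ appears. You would have to switch representations between the two regimes, and your proposal does not register this.

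The near-origin estimate itself is also underdetermined. Citing (\ref{eq:kernel_T}) for the complementary projection $\sum_{k\le n}\calT^{(k)}$ gives a bound on those projectors, not on $L_A\circ\widetilde{T}$ restricted near the origin — the composition with the pullback $u\mapsto u\circ A^{-1}$ is what produces the gain $\|A\|_{\min}^{-(n+1)}$ and it does not follow from the projector kernel bound. The mechanism actually needed is a Taylor-remainder estimate for $w=\multiplication(\chi)\circ\widetilde{T}\circ\Bargmann^*\circ\Xi\,u$ of the form $|\partial^\alpha w(y)|\le C|y|^{n+1-|\alpha|}\|u\|$ for $|\alpha|\le n+1$, and it holds only because the frequency cutoff $\Xi$ is placed at scale $\lambda=\|A\|_{\min}$, not at scale $1$: the uniform-in-$A$ bound on $\partial^{n+1}(\Bargmann^*\Xi u)$ in (\ref{eq:alphav}) uses $r>n+2D$ together with the $|\xi|\lesssim\lambda$ support, and without it the constant $C$ above would grow with $A$. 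A single radial cutoff at a fixed distance from the origin cannot reproduce this asymmetric scaling; the paper's separate cutoffs $X$ (in $x$, scale $O(1)$) and $\Xi$ (in $\xi$, scale $O(\lambda)$) are essential and not interchangeable with your one-parameter split.
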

}}}\end{center}
\begin{rem}
Proposition \ref{pp:structure_of_hLA} implies that the spectrum of
the transfer operator $L_{A}$ in the Hilbert space $H_{\hbar}^{r}\left(\real^{D}\right)$
is discrete outside the radius given by (\ref{eq:bound_norm_reminder-1}).
The eigenvalues outside this radius are given by the action of $L_{A}$
in the finite dimensional space $\Polynomial^{(k)}$. These eigenvalues
can be computed explicitly from the Jordan block decomposition of
$A$. In particular if $A=\mathrm{Diag}\left(a_{1},\ldots a_{D}\right)$
is diagonal then the monomials $x^{\alpha}=x_{1}^{\alpha_{1}}\ldots x_{D}^{\alpha_{D}}$
 are obviously eigenvectors of $L_{A}$ with respective eigenvalues
$\prod_{j}a_{j}^{-\alpha_{j}}$.

\end{rem}
\begin{proof}
For the proof of (\ref{eq:bound_on_norm-1}) and (\ref{eq:bound_on_norm}),
we use the fact that the space $\Polynomial^{(k)}$ is identical to
the space $\mathrm{Sym}^{k}\left(\mathbb{R}^{D}\right)$ of totally
symmetric tensors of rank $k$. For the linear operator $\left(A^{-1}\right)^{\otimes k}$
acting on $\left(\mathbb{R}^{D}\right)^{\otimes k}$, we have a commutative
diagram:

\[
\begin{CD}\left(\mathbb{R}^{D}\right)^{\otimes k}@>\left(A^{-1}\right)^{\otimes k}>>\left(\mathbb{R}^{D}\right)^{\otimes k}\\
@V\mathrm{Sym}VV@V\mathrm{Sym}VV\\
\mathrm{Sym}^{k}\left(\mathbb{R}^{D}\right)@>L_{A}>>\mathrm{Sym}^{k}\left(\mathbb{R}^{D}\right)
\end{CD}
\]
where $\mathrm{Sym}$ denotes the symmetrization operation. For every
$0\neq\tilde{u}\in\left(\mathbb{R}^{D}\right)^{\otimes k}$ we have
\[
\left\Vert A\right\Vert _{\max}^{-k}\leq\frac{\left\Vert \left(A^{-1}\right)^{\otimes k}\tilde{u}\right\Vert }{\left\Vert \tilde{u}\right\Vert }\leq\left\Vert A\right\Vert _{\min}^{-k}.
\]
Since the spaces are finite dimensional (and hence all norms are equivalent),
we deduce (\ref{eq:bound_on_norm-1}) for some constant $C_{0}>0$
independent of $A$, and also independent on $\hbar$ because of the
scaling invariance (\ref{eq:def_Wh}). The proof of the Claim (2)
is postponed to Subsection \ref{ss:pf}, as it requires more detailed
argument.
\end{proof}

\subsection{Proof of Claim (2) in Proposition \ref{pp:structure_of_hLA}}

\label{ss:pf} We prove Claim (2) on the lifted operator $\Llift_{A}$
in the latter part of the statement, which is equivalent to Claim
(2) in the former part. In the proof below, we may and do assume $\hbar=1$,
because the Bargmann transforms for different parameter $\hbar$ are
related by the scaling (\ref{eq:wI_hbar}), as we noted in Subsection
\ref{sub:The-Bargmann-transform}. Accordingly we will drop the subscript
$\hbar$ from the notation. Let $\chi:\real^{D}\to[0,1]$ be a smooth
function such that 
\begin{equation}
\chi\left(x\right)=\begin{cases}
1 & \mbox{ if \ensuremath{|x|\le1}}\\
0 & \mbox{ if \ensuremath{|x|\ge2}.}
\end{cases}\label{eq:def_chi}
\end{equation}
Below we use $C_{0}$ as a generic symbol for the constants which
do not depend on $A$ (but may depend on $r$, $n$ and $D$). Letting
$\lambda$ smaller if necessary, we suppose 
\[
\lambda=\|A^{-1}\|^{-1}>1
\]
 for simplicity. We write $\multiplication(\varphi)$ for the multiplication
operator by $\varphi$.

To prove the claim, it is enough to show 
\[
\|\Llift_{A}\circ\widetilde{\calT}\|_{L^{2}(\real^{2D},(W^{r})^{2})}\le C_{0}\cdot\max\{\lambda^{-n-1},\,\lambda^{-r}|\det A|\}
\]
where $\widetilde{\calT}$ is the operator defined in (\ref{eq:tTh})
with $\hbar=1$ and $\|\cdot\|_{L^{2}(\real^{2D},(W^{r})^{2})}$ denotes
the operator norm on $L^{2}(\real^{2D},(W^{r})^{2})$.

Let us consider the operators 
\[
X=\Bargmann\circ\multiplication(\chi)\circ\Bargmann^{*}:L^{2}(\real^{2D},W^{r})\to L^{2}(\real^{2D},W^{r})
\]
and 
\[
\Xi:L^{2}(\real^{2D},(W^{r})^{2})\to L^{2}(\real^{2D},(W^{r})^{2}),\quad\left(\Xi v\right)\left(x,\xi\right)=\chi\left(\frac{|\xi|}{\lambda}\right)\cdot v\left(x,\xi\right).
\]
The next lemma is the main ingredient of the proof. 
\begin{lem}
${\displaystyle \|\Llift_{A}\circ X\circ\widetilde{\calT}\circ\Xi\|_{L^{2}(\real^{2D},(W^{r})^{2})}\le C_{0}\cdot\lambda^{-(n+1)}}$. \end{lem}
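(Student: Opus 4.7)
My plan is to reduce the claim to a Schur-type kernel estimate in position space, exploiting the Taylor cancellation combined with the expansion by $A$. Using $\Bargmann^{*}\Bargmann=\mathrm{Id}$ together with the definitions of the lifts $\Llift_{A}=\Bargmann L_A\Bargmann^{*}$, $X=\Bargmann\multiplication(\chi)\Bargmann^{*}$ and $\widetilde{\calT}=\Bargmann\widetilde{T}\Bargmann^{*}$, one rewrites
\begin{equation*}
\Llift_{A}\circ X\circ\widetilde{\calT}\circ\Xi \;=\; \Bargmann\circ L_{A}\circ\multiplication(\chi)\circ\widetilde{T}\circ\Bargmann^{*}\circ\Xi,
\end{equation*}
so the Schwartz kernel of the composition on $\real^{2D}\times\real^{2D}$ is
\begin{equation*}
K(x,\xi;x',\xi') \;=\; \chi(|\xi'|/\lambda)\int_{\real^{D}}\overline{\phi_{x,\xi}(y)}\,\chi(A^{-1}y)\,\bigl(\widetilde{T}\phi_{x',\xi'}\bigr)(A^{-1}y)\,dy.
\end{equation*}
By Lemma \ref{lm:boundedness_of_molifier}, the desired operator norm bound follows once one establishes a pointwise estimate of the form
\begin{equation*}
\frac{W^{r}(x,\xi)}{W^{r}(x',\xi')}\,\bigl|K(x,\xi;x',\xi')\bigr|\;\le\; C_{0}\,\lambda^{-(n+1)}\,\bigl\langle|(x,\xi)-(x',\xi')|\bigr\rangle^{-\nu}
\end{equation*}
for some $\nu>2r+2D$.

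The gain $\lambda^{-(n+1)}$ is extracted from Taylor's integral remainder
\begin{equation*}
\widetilde{T}\phi_{x',\xi'}(z) \;=\; (n+1)\sum_{|\alpha|=n+1}\frac{z^{\alpha}}{\alpha!}\int_{0}^{1}(1-s)^{n}\bigl(\partial^{\alpha}\phi_{x',\xi'}\bigr)(sz)\,ds,
\end{equation*}
applied at $z=A^{-1}y$. Since $\|A^{-1}\|\le 1/\lambda$ and the Gaussian factor $e^{-|y-x|^{2}/2}$ of $\phi_{x,\xi}(y)$ confines the effective $y$ to a unit neighbourhood of $x$, the effective value of $z=A^{-1}y$ has size at most $|A^{-1}x|+O(\lambda^{-1})$, which on the part of phase space not already suppressed by Gaussian off-diagonal decay is of order $\lambda^{-1}$. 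Thus $|z^{\alpha}|\lesssim\lambda^{-(n+1)}$, producing the desired gain.

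The main technical obstacle is the companion factor produced by $\partial^{\alpha}\phi_{x',\xi'}(sA^{-1}y)$ with $|\alpha|=n+1$, whose pointwise size $(1+|\xi'|+|sA^{-1}y-x'|)^{n+1}$ reaches, on the support $|\xi'|\le 2\lambda$ of $\Xi$, the order $\lambda^{n+1}$ and threatens to cancel the gain. The plan to resolve this is to perform $n+1$ integrations by parts in $y$ against the combined oscillating phase $e^{i({}^{t}A^{-1}\xi'-\xi)\cdot y}$ coming from the product $\overline{\phi_{x,\xi}(y)}\cdot\phi_{x',\xi'}(sA^{-1}y)$: each integration by parts yields a factor $|\xi-{}^{t}A^{-1}\xi'|^{-1}$ and transfers one derivative onto the remaining smooth factors $\chi(A^{-1}y)\cdot z^{\alpha}\cdot(\text{Gaussian})$, which remain controlled by the Gaussian decay in $|y-x|$ and $|sA^{-1}y-x'|$. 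The net effect is to trade the amplitude growth $\lambda^{n+1}$ for polynomial off-diagonal decay in $|\xi-{}^{t}A^{-1}\xi'|$, the natural $\xi$-distance under the canonical lift $A\oplus{}^{t}A^{-1}$. Combining this with the Gaussian localization in the $x$-variables and the slow-variation estimate (\ref{eq:W_order_function}), which permits absorbing the weight ratio $W^{r}(x,\xi)/W^{r}(x',\xi')$ into an off-diagonal polynomial factor, one obtains the required pointwise kernel bound, and hence the operator-norm estimate $\le C_{0}\lambda^{-(n+1)}$ by Schur's lemma.
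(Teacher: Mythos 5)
Your proposal has the right ingredients in spirit (Taylor cancellation of order $n+1$, integration by parts against the combined oscillating phase, Schur's lemma via Lemma~\ref{lm:boundedness_of_molifier}), but it contains a genuine gap in the estimate of the Taylor factor $z^{\alpha}=(A^{-1}y)^{\alpha}$. You claim that $|z|$ is ``of order $\lambda^{-1}$ on the part of phase space not already suppressed by Gaussian off-diagonal decay.'' This is not correct. The Gaussian in $\phi_{x,\xi}(y)$ only localizes $|y-x|\lesssim 1$, so $|z|=|A^{-1}y|\approx|A^{-1}x|\lesssim|x|/\lambda$, while the cutoff $\chi(A^{-1}y)$ bounds $|z|\le 2$; neither confines $|z|$ to be $O(\lambda^{-1})$, since $|x|$ can range up to $O(\lambda)$ on the support. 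Hence $|z^{\alpha}|$ with $|\alpha|=n+1$ is only $O(1)$, not $O(\lambda^{-(n+1)})$, and the naive combination with the derivative factor $\lambda^{n+1}$ yields no net gain.

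What actually produces the gain in the paper's argument is the interaction of two distinct effects: the chain-rule factor $\lambda^{-|\alpha'|}$ each time an integration-by-parts derivative lands on $w(A^{-1}y)$, and the Taylor-remainder bound $|\partial^{\alpha'}w(z)|\lesssim|z|^{\,n+1-|\alpha'|}\lesssim(\langle|x|\rangle/\lambda)^{\,n+1-|\alpha'|}$. Together these give $\lambda^{-(n+1)}\langle|x|\rangle^{\,n+1-|\alpha'|}$, i.e.\ the desired $\lambda^{-(n+1)}$ but with a residual polynomial growth $\langle|x|\rangle^{n+1}$. This residual growth cannot simply be folded into the off-diagonal factor $\langle|(x,\xi)-(x',\xi')|\rangle^{-\nu}$ as your kernel bound pretends. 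The paper handles it by splitting $W^{r}\lesssim W^{r}_{+}+W^{r}_{-}$, treating the two cone regions separately: on $\supp W^{r}_{+}\subset\mathbf{C}_{-}(2)$ one uses $\langle|x|\rangle\le 2\langle|\xi|\rangle$ to trade the $\langle|x|\rangle^{n+1}$ growth against the rapid decay in $|\xi|$ gained from the integration by parts, while on the $W^{r}_{-}$ side a direct $L^{2}$ estimate using only the $\alpha=\emptyset$ Taylor bound suffices. This cone decomposition is essential and is entirely absent from your proposal; without it, the uniform kernel estimate you invoke for Schur's lemma is not established.
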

\begin{proof}
Let $\mathbf{1}_{{\mathbf{C}}_{-}(2)}$ be the indicator function
of the cone 
\[
{\mathbf{C}}_{-}(2):=\{(x,\xi)\mid|x|\le2|\xi|\}=\overline{\real^{2D}\setminus{\mathbf{C}}_{+}(1/2)}
\]
and set 
\[
W_{+}^{r}(x,\xi):=\mathbf{1}_{{\mathbf{C}}_{-}(2)}(x,\xi)\cdot\langle|\xi|\rangle^{r},\qquad W_{-}^{r}(x,\xi):=\langle|x|\rangle^{-r}.
\]
Then the weight function $W^{r}(x,\xi)$ satisfies 
\[
W^{r}(x,\xi)\le C_{0}\cdot W_{+}^{r}(x,\xi)+C_{0}\cdot W_{-}^{r}(x,\xi)
\]
for a constant $C_{0}>0$. Hence, to prove the lemma, it is enough
to show for $\sigma=\pm$ the claim 
\begin{equation}
\|W_{\sigma}^{r}\cdot\Bargmann\circ L_{A}\circ\multiplication(\chi)\circ\widetilde{T}\circ\Bargmann^{*}\circ\Xi\, u\|_{L^{2}}\le C_{0}\cdot\lambda^{-(n+1)}\|u\|\quad\mbox{for any }u\in L^{2}\left(\mathbb{R}^{2D},\left(W^{r}\right)^{2}\right)\label{eq:claim_to_prove}
\end{equation}
where $\widetilde{T}$ is the operator defined in (\ref{eq:def_T_tilde}).
Before proceeding to the proof of (\ref{eq:claim_to_prove}), we prepare
a few estimates. Take $u\in L^{2}\left(\mathbb{R}^{2D},\left(W^{r}\right)^{2}\right)$
arbitrarily and set 
\[
v(y):=\left(\Bargmann^{*}\circ\Xi\, u\right)(y)=\int\phi_{x,\xi}(y)\chi\left(\frac{|\xi|}{\lambda}\right)u(x,\xi)\, dxd\xi.
\]
Then, for any multi-index $\alpha\in\mathbb{N}^{D}$ and arbitrarily
large $\nu$, we have 
\begin{equation}
|\partial_{y}^{\alpha}v(y)|\le C_{\alpha,\nu}\int_{|\xi|\le2\lambda}\langle|x-y|\rangle^{-\nu}\cdot\langle|\xi|\rangle^{|\alpha|}\cdot\left|u(x,\xi)\right|\, dxd\xi\qquad\mbox{for any \ensuremath{y\in\real^{D}}.}\label{eq:estimate_for_dv}
\end{equation}
Note that we have 
\[
\langle|x-y|\rangle^{-r}\cdot\langle|\xi|\rangle^{r}\le C_{0}\cdot W^{r}(x,\xi)\quad\mbox{ for any \ensuremath{x,y,\xi\in\real^{D}}with \ensuremath{|y|\le2}}.
\]
Hence 
\[
\langle|x-y|\rangle^{-\nu}\cdot\langle|\xi|\rangle^{|\alpha|}\le C_{0}(\langle|x-y|\rangle^{-\nu+r}\cdot\langle|\xi|\rangle^{-D/2+1})\cdot\langle|\xi|\rangle^{|\alpha|+D/2+1-r}\cdot W^{r}(x,\xi).
\]
Putting this estimate in (\ref{eq:estimate_for_dv}) with $\nu\ge D/2+1+r$,
we obtain, by Cauchy-Schwarz inequality, 
\begin{equation}
|\partial_{y}^{\alpha}v(y)|\le C_{\alpha}\lambda^{\max\{|\alpha|+D/2+1-r,0\}}\|u\|_{L^{2}(\real^{2D},(W^{r})^{2})}\quad\mbox{for any \ensuremath{y\in\real^{D}}with \ensuremath{|y|\le2}. }\label{eq:alphav}
\end{equation}
Notice that, if $|\alpha|\le n+1$, we have $|\alpha|+D/2+1-r\le0$
from the assumption (\ref{eq:assumption_on_r}) and hence the last
inequality implies 
\begin{equation}
|\partial_{y}^{\alpha}v(y)|\le C_{\alpha}\|u\|_{L^{2}(\real^{2D},(W^{r})^{2})}\quad\mbox{for any \ensuremath{y\in\real^{D}}with \ensuremath{|y|\le2}. }\label{eq:alphav-1}
\end{equation}

Next we consider the function 
\[
w:=\multiplication(\chi)\circ\widetilde{T}\, v=\multiplication(\chi)\circ\widetilde{T}\circ\Bargmann^{*}\circ\Xi\, u.
\]
 Note that the support of $w$ is contained in that of $\chi$. It
follows from (\ref{eq:alphav}) that, for each multi-index $\alpha$,
\begin{equation}
|\partial^{\alpha}w(y)|\le C_{\alpha}\lambda^{\max\{|\alpha|+D/2+1-r,0\}}\|u\|_{L^{2}(\real^{2D},(W^{r})^{2})}\quad\mbox{ for all \ensuremath{y\in\real^{D}}.}\label{eq:taylor2}
\end{equation}
Further, if $|\alpha|\le n+1$, it follows from (\ref{eq:alphav-1})
and the definition of $\widetilde{T}$ that 
\begin{equation}
|\partial_{y}^{\alpha}w(y)|\le C_{\alpha}\left|y\right|^{n+1-\left|\alpha\right|}\cdot\max_{\left|y\right|\leq2}\left|\partial_{y}^{n+1}v\right|\leq C_{\alpha}|y|^{n+1-|\alpha|}\|u\|_{L^{2}(\real^{2D},(W^{r})^{2})}\quad\mbox{ for all \ensuremath{y\in\real^{D}.}}\label{eq:taylor}
\end{equation}

Now we prove the claim (\ref{eq:claim_to_prove}) in the case $\sigma=+$.
Let $u$, $v$, $w$ be as above. We are going to estimate the quantity
\[
\xi^{\alpha}\cdot(\Bargmann\circ L_{A}\circ\multiplication(\chi)\circ\widetilde{T}\circ\Bargmann^{*}\circ\Xi\, u)(x,\xi)=\xi^{\alpha}\int\overline{\phi_{x,\xi}(y)}\cdot w(A^{-1}y)\, dy.
\]
By integration by parts, we see that this is bounded in absolute value
by 
\begin{align*}
C_{\alpha,\nu}\sum_{\alpha'\le\alpha}\int\langle|x-y|\rangle^{-\nu}\cdot\lambda^{-|\alpha'|}\cdot|\partial^{\alpha'}w(A^{-1}y)|\, dy
\end{align*}
for each $\nu>0$, where $C_{\alpha,\nu}$ is a constant depending
only on $\alpha$ and $\nu$. If $\nu$ is sufficiently large, we
have from (\ref{eq:taylor2}), (\ref{eq:taylor}) and then from (\ref{eq:assumption_on_r})
that 
\begin{align*}
 & \sum_{\alpha'\le\alpha}\int\langle|x-y|\rangle^{-\nu}\cdot\lambda^{-|\alpha'|}\cdot|\partial^{\alpha'}w(A^{-1}y)|\, dy\\
 & \le C_{\alpha}\bigg(\sum_{\alpha':|\alpha'|\le n+1}\mathbf{\lambda^{-|\alpha'|}}\left(\frac{\langle|x|\rangle}{\lambda}\right)^{n+1-|\alpha'|}+\sum_{\alpha':n+2\le|\alpha'|\le|\alpha|}\lambda^{-|\alpha'|}\lambda^{\max\{|\alpha'|+D/2+1-r,0\}}\bigg)\|u\|_{L^{2}(\real^{2D},(W^{r})^{2})}\\
 & \le C_{\alpha}\cdot\lambda^{-(n+1)}\langle|x|\rangle^{n+1}\cdot\|u\|_{L^{2}(\real^{2D},(W^{r})^{2})}.
\end{align*}
Therefore we obtain 
\[
\langle|\xi|\rangle^{\nu}\cdot|\Bargmann\circ L_{A}\circ\multiplication(\chi)\circ\widetilde{T}\circ\Bargmann^{*}\circ\Xi u(x,\xi)|\le C_{\nu}\lambda^{-(n+1)}\|u\|_{L^{2}(\real^{2D},(W^{r})^{2})}\cdot\langle|x|\rangle^{n+1}
\]
for arbitrarily large $\nu$. For $(x,\xi)$ on $\supp W_{+}^{r}={\mathbf{C}}_{-}(2)$,
we have $\langle|x|\rangle\le2\langle|\xi|\rangle$ and hence
\[
W_{+}^{r}(x,\xi)\le\langle|\xi|\rangle^{r}\le C_{0}\langle|\xi|\rangle^{r+D/2+1+(n+1)}\cdot\langle|x|\rangle^{-D/2-1-(n+1)}.
\]
Using this in the last inequality, we get
\begin{multline*}
W_{+}^{r}(x,\xi)|(\Bargmann\circ L_{A}\circ\multiplication(\chi)\circ(\mathrm{Id}-\boldT_{n})\circ\Bargmann^{*}\circ\Xi u)(x,\xi)|\\
\le C_{\nu}\lambda^{-(n+1)}\|u\|_{L^{2}(\real^{2D},(W^{r})^{2})}\langle|x|\rangle^{-D/2-1}\langle|\xi|\rangle^{-\nu+r+D/2+1+(n+1)}.
\end{multline*}
This estimate for sufficiently large $\nu$ implies the claim (\ref{eq:claim_to_prove})
in the case $\sigma=+$, by Cauchy-Schwarz inequality.

We prove the claim (\ref{eq:claim_to_prove}) for $\sigma=-$. The
proof is easier than the previous case actually. Note that we have
\[
\|W_{-}^{r}\cdot\,\Bargmann\varphi\|_{L^{2}\left(\mathbb{R}^{2D}\right)}\le C_{0}\cdot\|\langle\cdot\rangle^{-r}\cdot\varphi(\cdot)\|_{L^{2}(\real^{D})}.
\]
Hence, from (\ref{eq:taylor}) with $\alpha=\emptyset$, we get 
\begin{align*}
\|W_{-}^{r}\cdot & \,\Bargmann\circ L_{A}\circ\multiplication(\chi)\circ\widetilde{T}\circ\Bargmann^{*}\circ\Xi\, u\|_{L^{2}\left(\mathbb{R}^{2D}\right)}=\|W_{-}^{r}\cdot\Bargmann\circ L_{A}\circ w\|_{L^{2}\left(\mathbb{R}^{2D}\right)}\\
 & \le C_{0}\cdot\left|\int\langle x\rangle^{-2r}\langle x/\lambda\rangle^{2(n+1)}dx\right|^{1/2}\cdot\|u\|_{L^{2}(\real^{2D},(W^{r})^{2})}\\
 & \le C_{0}\cdot\lambda^{-n-1}\cdot\|u\|_{L^{2}(\real^{2D},(W^{r})^{2})}.
\end{align*}
Clearly this implies (\ref{eq:claim_to_prove}) for $\sigma=-$. 
\end{proof}
To finish, it is enough to show 
\begin{equation}
\|\Llift_{A}\circ\widetilde{\calT}-\Llift_{A}\circ X\circ\widetilde{\calT}\circ\Xi\|_{L^{2}(\real^{2D},(W^{r})^{2})}\le C_{0}\lambda^{-r}|\det A|.\label{eq:remainder_part}
\end{equation}
 Note the relations 
\[
\Llift_{A}\circ\widetilde{\calT}=\Bargmann\circ L_{A}\circ\widetilde{T}\circ\Bargmann=\Bargmann\circ\widetilde{T}\circ L_{A}\circ\Bargmann=\widetilde{\calT}\circ\Llift_{A}
\]
 and 
\begin{align*}
\Llift_{A}\circ(\mathrm{Id}-X) & =\Bargmann\circ L_{A}\circ\multiplication(1-\chi)\circ\Bargmann^{*}\\
 & =\Bargmann\circ\multiplication(1-\chi_{A})\circ L_{A}\circ\Bargmann^{*}=(\mathrm{Id}-X_{A})\circ\Llift_{A}
\end{align*}
 where $\chi_{A}=\chi\circ A^{-1}$ and $X_{A}=\Bargmann\circ\chi_{A}\circ\Bargmann^{*}$.
Below we will prove the claims 
\begin{equation}
\|\Llift_{A}\circ(\mathrm{Id}-X)\|_{L^{2}(\real^{2D},(W^{r})^{2})}\le C_{0}\lambda^{-r}|\det A|\label{eq:norm_of_LAX}
\end{equation}
 and 
\begin{equation}
\|\Llift_{A}\circ(\mathrm{Id}-\Xi)\|_{L^{2}(\real^{2D},(W^{r})^{2})}\le C_{0}\lambda^{-r}|\det A|.\label{eq:norm_of_LAXI}
\end{equation}
 Since $\widetilde{\calT}=\mathrm{Id}-\sum_{k=0}^{n}\calT^{(k)}$
is a bounded operator on $L^{2}(\real^{2D},W^{r})$, these claims
would imply 
\[
\|\Llift_{A}\circ(\mathrm{Id}-X)\circ\widetilde{\calT}\|_{L^{2}(\real^{2D},(W^{r})^{2})}\le C_{0}\cdot\lambda^{-r}|\det A|
\]
 and 
\[
\|\Llift_{A}\circ X\circ\widetilde{\calT}\circ(\mathrm{Id}-\Xi)\|_{L^{2}(\real^{2D},(W^{r})^{2})}=\|X_{A}\circ\widetilde{\calT}\circ\Llift_{A}\circ(\mathrm{Id}-\Xi)\|_{L^{2}(\real^{2D},(W^{r})^{2})}\le C_{0}\lambda^{-r}|\det A|
\]
 and therefore the conclusion (\ref{eq:remainder_part}) would follow.

We can prove (\ref{eq:norm_of_LAX}) and (\ref{eq:norm_of_LAXI})
by straightforward estimate. Writing the kernel of the operator $X_{A}$
explicitly and applying integration by parts to it, we get the estimate
\[
|(\mathrm{Id}-X_{A})v(x,\xi)|\le C_{\nu}\int K_{1}^{(\nu)}(x,\xi;x',\xi')\cdot v(x',\xi')dx'd\xi'
\]
 for arbitrarily large $\nu$, where 
\[
K_{1}^{(\nu)}(x,\xi;x',\xi')=\int_{\supp(1-\chi_{A})}\langle|x-y|\rangle^{-\nu}\langle|y-x'|\rangle^{-\nu}\langle|\xi-\xi'|\rangle^{-\nu}dy.
\]
 From the expression (\ref{eq:Lift_A_expression}) of the operator
$\Llift_{A}$, we also have 
\[
|\Llift_{A}v(x',\xi')|\le C_{\nu}d\left(A\right)\int K_{2}^{(\nu)}(x',\xi';x'',\xi'')v(x'',\xi'')dx'd\xi'
\]
 for arbitrarily large $\nu$, where 
\[
K_{2}^{(\nu)}(x',\xi';x'',\xi'')=\int\langle|x'-x_{\dag}|\rangle^{-\nu}\langle|\xi'-\xi_{\dag}|\rangle^{-\nu}\langle|A^{-1}x_{\dag}-x''|\rangle^{-\nu}\langle|{}^{t}A\xi_{\dag}-\xi''|\rangle^{-\nu}dx_{\dag}d\xi_{\dag}.
\]
We have $d(A)\le\left|\det A\right|$ also. From the definition of
the function $W^{r}$ and the expanding property (\ref{eq:A_is_expanding})
of $A$, we have 
\[
\frac{W^{r}(x_{\dag},\xi_{\dag})}{W^{r}(A^{-1}x_{\dag},{}^{t}A\xi_{\dag})}\cdot\langle|x_{\dag}-y|\rangle^{-2r}\le C_{0}\lambda^{-r}\quad\mbox{if \ensuremath{y\in\supp(1-\chi_{A})}.}
\]
Also note that, from the property (\ref{eq:W_order_function_hbar})
of $W^{r}$, we have 
\[
W^{r}(x,\xi)\cdot\langle|x-y|\rangle^{-2r}\langle|y-x'|\rangle^{-2r}\langle|x'-x_{\dag}|\rangle^{-2r}\le C_{0}\cdot W^{r}(x_{\dag},\xi_{\dag})
\]
 and 
\[
\frac{1}{W^{r}(x'',\xi'')}\cdot\langle|A^{-1}x_{\dag}-x''|\rangle^{-2r}\langle|{}^{t}A\xi_{\dag}-\xi''|\rangle^{-2r}\le C_{0}\cdot\frac{1}{W^{r}(A^{-1}x_{\dag},{}^{t}A\xi_{\dag})}.
\]
Summarizing these estimates, we obtain 
\begin{align*}
\frac{W^{r}(x,\xi)}{W^{r}(x'',\xi'')}\cdot & \int K_{1}^{(\nu)}(x,\xi;x',\xi';y)\cdot K_{2}^{(\nu)}(x',\xi';x'',\xi'')dx'd\xi'\\
 & \le C_{0}\lambda^{-r}\cdot\int\int K_{1}^{(\nu-2r)}(x,\xi;x',\xi';y)\cdot K_{2}^{(\nu-4r)}(x',\xi';x'',\xi'')dx'd\xi'.
\end{align*}
By Schur inequality (\ref{eq:Schur_inequality}), the integral operators
with the kernels $K_{1}^{(\nu-4r)}(\cdot)$ and $K_{2}^{(\nu-2r)}(\cdot)$
are bounded operators on $L^{2}(\real^{2D})$ and the operator norms
are bounded by a constant that does not depend on $A$, provided $\nu$
is sufficiently large. Therefore the last estimate implies (\ref{eq:norm_of_LAX}).
We can prove the claim (\ref{eq:norm_of_LAXI}) in the same manner.

\section{\label{sec:Resonance-of-hyperbolic_preq_4}Resonance of hyperbolic
linear prequantum maps}

The main result of this section, Proposition \ref{prop:prequatum_op_for_hyp_linear},
concerns the spectrum of the prequantum transfer operator for linear
hyperbolic symplectic maps.

\subsection{Prequantum transfer operator on $\real^{2d}$\label{sub: Euclidean_prequantum_and_Laplacian_operator}}

In this section, we study prequantum transfer operators and rough
Laplacian in a special and easy case: The manifold $M$ is the linear
space $\mathbb{R}^{2d}$ with the coordinates 
\begin{equation}
x\equiv\left(q,p\right)=\left(q^{1},\ldots q^{d},p^{1},\ldots p^{d}\right)\in\mathbb{R}^{2d}.\label{eq:Coordinate_p_q}
\end{equation}
We regard it as a symplectic manifold equipped with the symplectic
two form 

\begin{equation}
\omega=dq\wedge dp:=\sum_{i=1}^{d}dq^{i}\wedge dp^{i}.\label{eq:Symplectic_form_on_Euclidean_space}
\end{equation}
\label{eq:Euclidean_symplectic_form}The prequantum bundle $P$ is
the trivial $U(1)$-bundle $\pi:P=\real^{2d}\times U(1)\to\real^{2d}$
over $\real^{2d}$ equipped with the connection one form $A=id\theta-i(2\pi)\eta$
where 
\begin{equation}
\eta=\sum_{i=1}^{d}\left(\frac{1}{2}q^{i}dp^{i}-\frac{1}{2}p^{i}dq^{i}\right).\label{eq:eta_3-1}
\end{equation}
The corresponding curvature two form is then 
\[
\Theta=-i(2\pi)(\pi^{*}\omega)
\]
because $\omega=d\eta$. Under these settings, we may rephrase the
construction of the prequantum transfer operator for a symplectic
diffeomorphism on $\real^{2d}$. 

Let $f:U\to U'$ be a symplectic diffeomorphism between two domains
$U$ and $U'$ in $\real^{2d}$ with respect to the symplectic two
form $\omega$. Let $\tilde{f}:U\times\mathbf{U}(1)\to U'\times\mathbf{U}(1)$
be the equivariant lift of $f$ preserving the connection $A$, that
is, the map satisfying the conditions (\ref{eq:lift_of_f}), (\ref{eq:equivariance_f_tilde})
and (\ref{eq:preserve_connection}) in Theorem \ref{thm:Bundle-P_map_f_tilde}.
Then we define as in (\ref{eq:def_prequantum_operator_F}) (but for
the set $V\equiv0$) the prequantum operator 
\[
\hat{F}:C^{\infty}(U\times\mathbf{U}(1))\to C^{\infty}(U'\times\mathbf{U}(1)),\qquad\hat{F}u(x)=u\circ\tilde{f}^{-1}(x)
\]
and let 
\[
\hat{F}_{N}:C_{N}^{\infty}(U\times\mathbf{U}(1))\to C_{N}^{\infty}(U'\times\mathbf{U}(1))
\]
be its restriction to the space of functions in the $N$-th Fourier
mode.  Let 
\[
\mathcal{L}_{f}:C^{\infty}(\real^{2d})\to C^{\infty}(\real^{2d})
\]
be the expression of the prequantum transfer operator $\hat{F}_{N}$
with respect to the trivialization using the trivial section $\tau_{0}:\real^{2d}\to P=\real^{2d}\times U(1)$
defined by $\tau_{0}(x)=(x,1)$. This operator $\mathcal{L}_{f}$
is the\emph{ prequantum transfer operator }for $f:U\to U'$. (Note
that $\mathcal{L}_{f}$ depends on the integer $N\in\mathbb{Z}$ and
hence on $\hbar$.) We recall its concrete expression obtained in
Proposition \ref{prop:Local-expression-of_FN}.
\begin{prop}
\label{prop:expression_of_prequantum_op_on_Rn-1}The operator $\mathcal{L}_{f}$
as above is written 
\begin{equation}
\left(\mathcal{L}_{f}u\right)\left(x\right):=e^{-\frac{i}{\hbar}\mathcal{A}_{f}\left(f^{-1}\left(x\right)\right)}u\left(f^{-1}\left(x\right)\right)\label{eq:F_f_affine-1}
\end{equation}
with the (action) function 
\begin{equation}
\mathcal{A}_{f}\left(x\right)=\int_{\gamma}f^{*}\eta-\eta\label{eq:Action_affine_map-1}
\end{equation}
where $\gamma$ is a path from a fixed point $x_{0}\in U'$ to $x$.
\end{prop}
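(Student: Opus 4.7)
The plan is to deduce this proposition as a direct specialization of the general local expression of the prequantum transfer operator obtained in Proposition \ref{prop:Local-expression-of_FN}, applied to the trivial bundle $P=\real^{2d}\times\mathbf{U}(1)$ with its global section $\tau_{0}(x)=(x,1)$.

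First I would observe that because $P$ is trivial, we may take $\tau_{0}$ as a single chart covering both $U$ and $U'$; the local connection one-form with respect to this trivialization is precisely $\eta$ in (\ref{eq:eta_3-1}), globally defined on $\real^{2d}$. Writing $u_{0}(x):=u(\tau_{0}(x))=u(x,1)$ for a function $u\in C_{N}^{\infty}(U\times\mathbf{U}(1))$, the operator $\mathcal{L}_{f}$ is by definition the conjugation of $\hat{F}_{N}$ by this trivialization, so formula (\ref{eq:expression_op_locale}) with $\alpha=\beta=0$, $V\equiv 0$, and $2\pi N=1/\hbar$ yields directly
\[
(\mathcal{L}_{f}u)(x)=e^{-\tfrac{i}{\hbar}\mathcal{A}_{0,0}(f^{-1}(x))}\,u(f^{-1}(x)),
\]
where $\mathcal{A}_{0,0}$ is the action function from (\ref{eq:def_A_action}), given by $\mathcal{A}_{0,0}(x)=\int_{\gamma}(f^{*}\eta-\eta)+c(x_{0})$ along any path $\gamma\subset U'$ from a reference point $x_{0}$ to $x$.

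Second, I would verify path-independence of the line integral: since $f$ is symplectic we have $d(f^{*}\eta-\eta)=f^{*}\omega-\omega=0$, so $f^{*}\eta-\eta$ is a closed one-form on the (simply connected, or shrunk to be so) open set $U'\subset\real^{2d}$, whence the integral depends only on the endpoints. Third, I would normalize the constant $c(x_{0})$. By the uniqueness statement following Theorem \ref{thm:Bundle-P_map_f_tilde}, the equivariant lift $\tilde{f}$ is determined only up to a global phase $e^{i\theta_{0}}\in\mathbf{U}(1)$, and we fix this ambiguity once and for all by requiring $\tilde{f}(\tau_{0}(x_{0}))=\tau_{0}(f(x_{0}))$ at the chosen base point, which forces $c(x_{0})=0$ in the derivation of (\ref{eq:def_A_action}). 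Substituting into the previous display gives exactly the claimed expression with $\mathcal{A}_{f}$ as in (\ref{eq:Action_affine_map-1}).

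There is no substantive obstacle here: the statement is really just a convenient rewriting of Proposition \ref{prop:Local-expression-of_FN} in the Euclidean trivial-bundle setting. The only thing to be careful about is the bookkeeping for the overall phase (choice of lift) and making explicit that the definition of $\mathcal{A}_{f}(x)$ via (\ref{eq:Action_affine_map-1}) is unambiguous, which reduces to the closedness of $f^{*}\eta-\eta$ together with simple connectedness of $U'$.
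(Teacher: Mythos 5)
Your proof is correct and follows exactly the route the paper intends: the paper presents Proposition \ref{prop:expression_of_prequantum_op_on_Rn-1} as a ``recall'' of Proposition \ref{prop:Local-expression-of_FN} specialized to the trivial bundle with the global section $\tau_{0}$, and your write-up fills in the details (specializing (\ref{eq:expression_op_locale}) and (\ref{eq:def_A_action}) to $\alpha=\beta=0$, $V=0$, $2\pi N=1/\hbar$; closedness of $f^{*}\eta-\eta$ from $f^{*}\omega=\omega$; and fixing the global phase of the lift so that $c(x_{0})=0$). The only nit is that the path $\gamma$ lives in the domain $U$ of $f$ rather than in $U'$ (the form $f^{*}\eta-\eta$ is defined on $U$), and correspondingly it is simple connectedness of $U$ that is relevant — a slip that is already present in the paper's own phrasing and does not affect the argument.
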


\subsection{\label{sub:Prequantum-operator-on_R2d}Prequantum operator for a
symplectic affine map on $\mathbb{R}^{2d}$}

Let $f:\real^{2d}\rightarrow\real^{2d}$ be an affine map preserving
the symplectic form $\omega$, written:
\begin{equation}
f:\real^{2d}\to\real^{2d},\qquad f(x)=Bx+b\label{eq:affine_map}
\end{equation}
where $B:\real^{2d}\to\real^{2d}$ is a linear symplectic map and
$b\in\real^{2d}$ a constant vector. 
\begin{prop}
\label{prop:expression_of_prequantum_op_on_Rn}The prequantum transfer
operator $\mathcal{L}_{f}$ for an affine map $f$ as above is written
as 
\begin{equation}
\mathcal{L}_{f}u\left(x\right):=e^{-\frac{i}{\hbar}\mathcal{A}_{f}\left(f^{-1}\left(x\right)\right)}u\left(f^{-1}\left(x\right)\right)\label{eq:F_f_affine}
\end{equation}
with the (action) function 
\begin{equation}
\mathcal{A}_{f}\left(x\right)=\frac{1}{2}\omega\left(b,x\right).\label{eq:Action_affine_map}
\end{equation}
\end{prop}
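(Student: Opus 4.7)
The plan is to invoke Proposition \ref{prop:expression_of_prequantum_op_on_Rn-1}, which already expresses the prequantum transfer operator in the form (\ref{eq:F_f_affine}) with the integral representation $\mathcal{A}_f(x) = \int_\gamma (f^*\eta - \eta)$ for any path $\gamma$ joining a reference point $x_0$ to $x$ in $\mathbb{R}^{2d}$. What remains is to evaluate this line integral explicitly in the affine case $f(x) = Bx + b$ with $B$ symplectic. The only identity needed to start is the intrinsic expression $\eta_x(v) = \tfrac{1}{2}\omega(x, v)$, which is immediate from (\ref{eq:eta_3-1}) together with the Euclidean form (\ref{eq:Symplectic_form_on_Euclidean_space}).

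The core of the argument is a two-line computation. Since $Df_x = B$ and $B$ is symplectic,
\[
(f^{*}\eta)_x(v) \;=\; \eta_{Bx + b}(Bv) \;=\; \tfrac{1}{2}\omega(Bx, Bv) + \tfrac{1}{2}\omega(b, Bv) \;=\; \eta_x(v) + \tfrac{1}{2}\omega(b, Bv),
\]
where the symplecticity of $B$ was used in the form $\omega(Bx, Bv) = \omega(x, v)$. Hence $f^{*}\eta - \eta = \tfrac{1}{2}\omega(b, B\,\cdot)$ is a \emph{constant} one-form on $\mathbb{R}^{2d}$, and its line integral is automatically path-independent (consistent with the general fact that $d(f^{*}\eta - \eta) = f^{*}\omega - \omega = 0$). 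Integrating along the straight segment from the origin to $x$ produces the primitive $\tfrac{1}{2}\omega(b, Bx)$, up to an additive constant that encodes the $\mathbf{U}(1)$ phase freedom $e^{i\theta_0}$ in the choice of the lift $\tilde f$ and is therefore harmless.

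To recover the compact form asserted in (\ref{eq:Action_affine_map}), one substitutes $f^{-1}(x)$ into the primitive and uses $Bf^{-1}(x) = x - b$ together with the antisymmetry $\omega(b, b) = 0$:
\[
\tfrac{1}{2}\omega(b, Bf^{-1}(x)) \;=\; \tfrac{1}{2}\omega(b, x - b) \;=\; \tfrac{1}{2}\omega(b, x),
\]
so that the phase actually appearing in (\ref{eq:F_f_affine}) collapses to $-\tfrac{i}{\hbar}\cdot\tfrac{1}{2}\omega(b, x)$, in agreement with the statement. No conceptual obstacle is anticipated; the only thing requiring mild attention is the bookkeeping of the additive gauge constant and the simplification produced by the identity $Bf^{-1}(x) = x - b$, which is precisely what makes the factor $B$ drop out of the final expression for the action.
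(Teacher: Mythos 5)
Your computation is correct, but it takes a genuinely different route from the paper's. The paper decomposes $f = f_1\circ f_2$ into a translation $f_1(x)=x+b$ and a linear symplectic map $f_2(x)=Bx$, computes $\mathcal{A}_{f_2}=0$ and $\mathcal{A}_{f_1}(x)=\tfrac12\omega(b,x)$ by integrating $\eta$ along explicit radial paths, and then invokes a composition rule for the actions. You instead compute the one-form $f^*\eta-\eta$ pointwise (using symplecticity of $B$) and observe directly that it is the \emph{constant} one-form $\tfrac12\omega(b,B\,\cdot)$, whose integral from $0$ to $x$ is $\tfrac12\omega(b,Bx)$. Your route is shorter and, I think, more transparent, because it makes visible something the paper's argument elides: the integral $\int_\gamma(f^*\eta-\eta)$ genuinely equals $\tfrac12\omega(b,Bx)$ and \emph{does} depend on $B$. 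It is only the composite $\mathcal{A}_f\circ f^{-1}$ (the quantity actually entering the phase of $\mathcal{L}_f$) that is $B$-independent, precisely by your identity $Bf^{-1}(x)=x-b$. The paper's stated formula $\mathcal{A}_f(x)=\tfrac12\omega(b,x)$ and the remark that it ``does not depend on $B$'' should therefore be read as statements about the phase $e^{-\frac{i}{\hbar}\mathcal{A}_f(f^{-1}(x))}$ rather than about the function $\mathcal{A}_f$ defined by the line integral; your final sentence (``in agreement with the statement'') is a little loose on exactly this point, since what you have shown is that the \emph{phase} is $e^{-\frac{i}{2\hbar}\omega(b,x)}$, while the literal substitution of the stated $\mathcal{A}_f$ into $e^{-\frac{i}{\hbar}\mathcal{A}_f(f^{-1}(x))}$ would give $e^{-\frac{i}{2\hbar}\omega(b,f^{-1}(x))}$, which is a different expression unless $B=\mathrm{Id}$. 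None of this affects the correctness of your operator formula --- which is the formula the paper actually uses later, e.g.\ in writing $\mathcal{L}_f=T_{(b,-\frac12 b^\flat)}\circ L_B$ --- but it is worth flagging the distinction explicitly rather than letting it disappear inside the phrase ``in agreement with the statement.''
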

\begin{rem}
Notice that the function $\mathcal{A}_{f}$ in (\ref{eq:Action_affine_map})
does not depend on the linear map $B$ which enters in (\ref{eq:affine_map}).\end{rem}
\begin{proof}
Then, for any $x=\left(q,p\right)\in\real^{2d}$, using the parametrized
path $\gamma\left(t\right)=(q'\left(t\right),p'\left(t\right)=\left(tq,tp\right)$
with $t\in\left[0,1\right]$ we have 
\[
\int_{\gamma}\eta=\int_{\gamma}\frac{1}{2}\left(q'dp'-p'dq'\right)=\int_{0}^{1}\frac{1}{2}\left(tqp-tpq\right)dt=0
\]
Therefore for a linear symplectic map $f_{2}\left(x\right)=Bx$, the
action defined in (\ref{eq:Action_affine_map-1}) vanishes:
\[
\mathcal{A}_{f_{2}}\left(x\right)=\int_{f_{2}\left(0\right)}^{f_{2}\left(x\right)}\eta-\int_{0}^{x}\eta=0-0=0.
\]
For a translation map $f_{1}\left(x\right)=x+b$, using the parametrized
path $\left(q'\left(t\right),p'\left(t\right)\right)=\left(tq+b_{q},tp+b_{p}\right)$
with $b=\left(b_{q},b_{p}\right)$ and $t\in\left[0,1\right]$ we
have 
\begin{eqnarray*}
\mathcal{A}_{f_{1}}\left(x\right) & = & \int_{b}^{x+b}\eta-\int_{0}^{x}\eta=\int_{b}^{x+b}\frac{1}{2}\left(q'dp'-p'dq'\right)\\
 & = & \int_{0}^{1}\frac{1}{2}\left(\left(tq+b_{q}\right)p-\left(tp+b_{p}\right)q\right)dt=\frac{1}{2}\left(b_{q}p-b_{p}q\right)=\frac{1}{2}\omega\left(b,x\right)
\end{eqnarray*}
Finally for the affine map $f\left(x\right)=Bx+b=\left(f_{1}\circ f_{2}\right)\left(x\right)$,
the action is 
\[
\mathcal{A}_{f}=\mathcal{A}_{f_{1}}+\mathcal{A}_{f_{2}}\circ f_{2}^{1}=\mathcal{A}_{f_{1}}=\frac{1}{2}\omega\left(b,x\right).
\]

\end{proof}
We next consider the lift of the operator $\mathcal{L}_{f}$, Eq.(\ref{eq:F_f_affine}),
with respect to the Bargmann transform $\Bargmann_{\hbar}$. Following
the idea explained in Subsection \ref{sub:Semiclassical-description-of-prequantum_op},
we express it with respect to the coordinates $\left(\nu,\zeta\right)$
introduced in Proposition \ref{prop:Normal-coordinates.}. Then, in
the next Lemma, we will obtain an expression of $\mathcal{L}_{f}$
as a tensor product of two operators: each of the two operators is
associated to the dynamics of the canonical map $F=^{t}Df^{-1}:T^{*}\real^{2d}\to T^{*}\real^{2d}$
of $\mathcal{L}_{f}$ in the directions along and orthogonal to the
trapped set $K$, defined by
\[
K=\{(x,\xi)\in\real^{2d}\mid\zeta=0\}
\]
in the simple setting we are considering. (See Proposition \ref{prop:Normal-coordinates.}.)

Let us write the change of variable given in Proposition \ref{prop:Normal-coordinates.}
as 
\begin{equation}
\Phi:\left(\underbrace{q,p}_{x},\underbrace{\xi_{q},\xi_{p}}_{\xi}\right)\in\mathbb{R}^{2d}\oplus\mathbb{R}^{2d}\rightarrow\left(\underbrace{\nu_{q},\nu_{p}}_{\nu},\underbrace{\zeta_{p},\zeta_{q}}_{\zeta}\right)\in\mathbb{R}^{2d}\oplus\mathbb{R}^{2d}.\label{eq:def_phi_change}
\end{equation}
It maps the standard symplectic form $\Omega_{0}=dx\wedge d\xi$ on
$\mathbb{R}^{2d}\oplus\mathbb{R}^{2d}$ to 
\[
(D\Phi^{*})^{-1}(\Omega_{0})=d\nu_{q}\wedge d\nu_{p}+d\zeta_{p}\wedge d\zeta_{q}
\]
and the metric 
\begin{equation}
g_{0}=\frac{1}{2}dx^{2}+2d\xi^{2}\label{eq:metric_g0}
\end{equation}
 on $\real^{2d}\oplus\mathbb{R}^{2d}$ ($g_{0}$ is the metric induced
by $g$ on $T^{*}\mathbb{R}^{2d}$ as explained in Section \ref{sub:Compatible-metrics-and})
to the standard Euclidean metric on $\mathbb{R}^{2d}\oplus\mathbb{R}^{2d}$:
\[
(D\Phi^{*})^{-1}(g_{0})=d\nu^{2}+d\zeta^{2}.
\]

\begin{rem}
With the choice of metric $g_{0}$, in (\ref{eq:metric_g0}), the
linear subsets $K$ and $\left(K^{\perp_{\omega}}\right)$ are $\Omega_{0}$-symplectic
orthogonal but are also $g_{0}-$orthogonal.

The unitary operator associated to the coordinate change $\Phi$ is
defined as 
\[
\Phi^{*}:L^{2}\left(\real_{\nu}^{2d}\oplus\real_{\zeta}^{2d}\right)\to L^{2}\left(\real_{x}^{2d}\oplus\real_{\xi}^{2d}\right),\quad\left(\Phi^{*}u\right):=u\circ\Phi.
\]
Here (and henceforth) we make the convention that the subscript in
the notation such as $\mathbb{R}_{\nu}^{2d}$ indicates the name of
the coordinates on the space. 
\end{rem}
We define the operators 
\[
\mathcal{B}_{\nu_{q}}:L^{2}\left(\mathbb{R}_{\nu_{q}}^{d}\right)\rightarrow L^{2}\left(\mathbb{R}_{(\nu_{q},\nu_{p})}^{2d}\right)\quad\mbox{and }\quad\mathcal{B}_{\nu_{q}}^{*}:L^{2}\left(\mathbb{R}_{(\nu_{q},\nu_{p})}^{2d}\right)\to L^{2}\left(\mathbb{R}_{\nu_{q}}^{d}\right)
\]
as the Bargmann transform $\mathcal{B}_{\hbar}$ and its adjoint $\mathcal{B}_{\hbar}^{*}$
in Subsection \ref{sub:The-Bargmann-transform} for the case $D=d$.
We define 

\[
\mathcal{B}_{\zeta_{p}}:L^{2}\left(\mathbb{R}_{\zeta_{p}}^{d}\right)\rightarrow L^{2}\left(\mathbb{R}_{(\zeta_{p},\zeta_{q})}^{2d}\right)\quad\mbox{and }\quad\mathcal{B}_{\zeta_{p}}^{*}:L^{2}\left(\mathbb{R}_{(\zeta_{p},\zeta_{q})}^{2d}\right)\to L^{2}\left(\mathbb{R}_{\zeta_{p}}^{d}\right)
\]
similarly. Suppose that $\BargmannP_{\nu_{q}}$ and $\BargmannP_{\zeta_{p}}$
are defined correspondingly: That is to say, with setting $D=d$,
we define
\[
\Bargmann_{\nu_{q}}=\Bargmann_{\zeta_{p}}=\Bargmann_{\hbar},\quad\Bargmann_{\nu_{q}}^{*}=\Bargmann_{\zeta_{p}}^{*}=\Bargmann_{\hbar}^{*}\mbox{ and }\BargmannP_{\nu_{q}}=\BargmannP_{\zeta_{p}}=\BargmannP_{\hbar}.
\]
Next we define the operators 
\[
\mathcal{B}_{x}:L^{2}\left(\mathbb{R}_{x}^{2d}\right)\rightarrow L^{2}\left(\mathbb{R}_{\left(x,\xi\right)}^{4d}\right)\quad\mbox{and }\quad\mathcal{B}_{x}^{*}:L^{2}\left(\mathbb{R}_{\left(x,\xi\right)}^{4d}\right)\to L^{2}\left(\mathbb{R}_{x}^{2d}\right)
\]
by 
\begin{equation}
\mathcal{B}_{x}:=\tilde{\sigma}^{-1}\circ\mathcal{B}_{\hbar}\circ\sigma\qquad\mbox{and }\quad\mathcal{B}_{x}^{*}:=\sigma^{-1}\circ\mathcal{B}_{\hbar}^{*}\circ\tilde{\sigma}\label{eq:Bargmann_modified}
\end{equation}
where $\mathcal{B}_{\hbar}$ and $\mathcal{B}_{\hbar}^{*}$ are now
those in the case $D=2d$, and 
\[
\sigma:L^{2}(\real_{x}^{2d})\to L^{2}(\real_{x}^{2d})\quad\mbox{and}\quad\tilde{\sigma}:L^{2}(\real_{(x,\xi)}^{4d})\to L^{2}(\real_{(x,\xi)}^{4d})
\]
are simple unitary operators defined by 
\[
\sigma u(x)=2^{-d}u(2^{-1/2}x)\quad\mbox{and }\quad\tilde{\sigma}v(x,\xi)=v(2^{-1/2}x,2^{1/2}\xi)
\]

introduced in relation to the additional factor $1/2$ in (\ref{eq:metric_g0}).

Correspondingly we set 
\begin{equation}
\BargmannP_{x}=\Bargmann_{x}\circ\Bargmann_{x}^{*}=\tilde{\sigma}^{-1}\circ\BargmannP_{\hbar}\circ\tilde{\sigma}.\label{eq:BargmannP_modified}
\end{equation}

\begin{rem}
\label{rem4.4}
\begin{enumerate}
\item In terms of the (generalized) Bargmann transforms considered in Subsection
\ref{sub:Bargmann_transform_generalized}, the operators $\mathcal{B}_{x}$
and $\mathcal{B}_{x}^{*}$ are the Bargmann transform and its adjoint
for the combination of the Euclidean space $E=\real^{4d}$, the standard
symplectic form $\Omega_{0}=dx\wedge d\xi$, the metric $g_{0}=\frac{1}{2}dx^{2}+2d\xi^{2}$
and the Lagrangian subspace $\real_{x}^{2d}$. Since the metric $g_{0}$
corresponds to the standard Euclidean metric through $\Phi$, the
definition of the operators $\Bargmann_{x}$, $\Bargmann_{x}^{*}$
and $\BargmannP_{x}$ above is more convenient and natural for our
argument than those without $\sigma$ and $\tilde{\sigma}$. Indeed
we have 
\begin{equation}
\BargmannP_{x}^{*}\circ\Phi^{*}=\Phi^{*}\circ(\BargmannP_{\nu_{q}}\otimes\BargmannP_{\zeta_{p}})^{*}.\label{eq:relation_between_Bargmann_Projectors}
\end{equation}

\item In the notation introduced above, the subscripts indicate the related
coordinates. Though this may deviate from the standard usage, it is
convenient for our argument. Notice that the operators introduced
above, such as $\Bargmann_{x}$, depend on the parameter $\hbar$
(and hence on $N$). 
\end{enumerate}
\end{rem}
\begin{lem}
\label{prop:affine_Lf}Let $\mathcal{L}_{f}$ be the prequantum transfer
operator (\ref{eq:F_f_affine}) associated to a symplectic affine
map in (\ref{eq:affine_map}). Then the following diagram commutes:
\begin{equation}
\begin{CD}L^{2}\left(\real_{x}^{2d}\right)@>\mathcal{L}_{f}>>L^{2}\left(\real_{x}^{2d}\right)\\
@AA\mathcal{U}A@AA\mathcal{U}A\\
L^{2}\left(\real_{\nu_{q}}^{d}\right)\otimes L^{2}\left(\real_{\zeta_{p}}^{d}\right)@>M_{\nu}\left(f\right)\otimes M_{\zeta}\left(B\right)>>L^{2}\left(\real_{\nu_{q}}^{d}\right)\otimes L^{2}\left(\real_{\zeta_{p}}^{d}\right)
\end{CD}\label{eq:L_f_conjugated}
\end{equation}
where $\mathcal{U}$, $M_{\nu}(f)$ and $M_{\zeta}(B)$ are the unitary
operators defined respectively by
\begin{alignat}{2}
 & \mathcal{U}:L^{2}\left(\real_{\nu_{q}}^{d}\right)\otimes L^{2}\left(\real_{\zeta_{p}}^{d}\right)\to L^{2}\left(\real_{x}^{2d}\right), & \quad & \mathcal{U=}\mathcal{B}_{x}^{*}\circ\Phi^{*}\circ\left(\mathcal{B}_{\nu_{q}}\otimes\mathcal{B}_{\zeta_{p}}\right),\label{eq:def_U_operator}\\
 & M_{\nu}\left(f\right):L\left(\real_{\nu_{q}}^{d}\right)\rightarrow L^{2}\left(\real_{\nu_{q}}^{d}\right), &  & M_{\nu}\left(f\right)=\sqrt{d\left(B\right)}\cdot\mathcal{B}_{\nu_{q}}^{*}\circ(e^{\frac{i}{2\hbar}\omega\left(\nu,b\right)}\cdot L_{f})\circ\mathcal{B}_{\nu_{q}},\nonumber \\
 & M_{\zeta}\left(B\right):L^{2}\left(\real_{\zeta_{p}}^{d}\right)\rightarrow L^{2}\left(\real_{\zeta_{p}}^{d}\right), &  & M_{\zeta}\left(B\right)=\sqrt{d\left(B\right)}\cdot\mathcal{B}_{\zeta_{p}}^{*}\circ L_{B}\circ\mathcal{B}_{\zeta_{p}}\nonumber 
\end{alignat}
with $d\left(B\right)=\det\left(\left(1+{}^{t}B\cdot B\right)/2\right)^{1/2}$,
$\left(L_{f}u\right)\left(\nu\right):=\left(u\circ f^{-1}\right)\left(\nu\right)$
and $L_{B}u=u\circ B^{-1}$ as before. Equivalently, in terms of lifted
operators, we have the following commuting diagram:
\begin{equation}
\begin{CD}L^{2}\left(\real_{x}^{2d}\oplus\real_{\xi}^{2d}\right)@>\prequantumLlift_{f}>>L^{2}\left(\real_{x}^{2d}\oplus\real_{\xi}^{2d}\right)\\
@AA\Phi^{*}A@AA\Phi^{*}A\\
L^{2}\left(\real_{\nu}^{2d}\right)\otimes L^{2}\left(\real_{\zeta}^{2d}\right)@>M_{\nu}^{\mathrm{lift}}\left(f\right)\otimes M_{\zeta}^{\mathrm{lift}}\left(B\right)>>L^{2}\left(\real_{\nu}^{2d}\right)\otimes L^{2}\left(\real_{\zeta}^{2d}\right)
\end{CD}\label{eq:expression_as_tensorproduct}
\end{equation}
 \end{lem}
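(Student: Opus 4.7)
The plan is to exploit the affine structure by decomposing $f$ as the composition of the linear symplectic map $B$ and the translation by $b$, and correspondingly writing $\prequantumL_{f}=\prequantumL_{b}\circ\prequantumL_{B}$, where Proposition~\ref{prop:expression_of_prequantum_op_on_Rn} gives $\prequantumL_{B}u=u\circ B^{-1}$ (since $\mathcal{A}_{f}=0$ when $b=0$) and $\prequantumL_{b}u(x)=e^{-\frac{i}{2\hbar}\omega(b,x)}u(x-b)$ after a short computation using $\omega(b,b)=0$. The lift of each factor through the Bargmann transform is already analyzed in Section~\ref{sec:Resonances-of-linear_exp_3}: Lemma~\ref{lm:lift_of_LA} gives $L_{B}^{\mathrm{lift}}=d(B)\cdot\BargmannP_{x}\circ L_{B\oplus{}^{t}B^{-1}}\circ\BargmannP_{x}$, and Lemma~\ref{lm:lift_of_LA-1} with Corollary~\ref{cor:4.11} controls the translation factor; the non-standard scalings $\sigma,\tilde{\sigma}$ in~(\ref{eq:Bargmann_modified}) are absorbed at this stage using the intertwining~(\ref{eq:relation_between_Bargmann_Projectors}).

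The key geometric ingredient is that, after the change of variables $\Phi$ of Proposition~\ref{prop:Normal-coordinates.}, the canonical map of $\prequantumL_{f}$ splits into independent actions on $\nu$ and $\zeta$. Starting from the global expression $F(x,\zeta)=(f(x),{}^{t}(Df)^{-1}\zeta)$ of Proposition~\ref{prop:Canonical_map F} and using that $B$ is symplectic so that $({}^{t}B^{-1}\zeta)^{\sharp}=B\zeta^{\sharp}$, one gets $\nu'=x'-(\zeta')^{\sharp}=B(x-\zeta^{\sharp})+b=B\nu+b=f(\nu)$ while $\zeta'={}^{t}B^{-1}\zeta$. Thus in the $(\nu,\zeta)$ coordinates the canonical map becomes the block-diagonal map $(\nu,\zeta)\mapsto(f(\nu),{}^{t}B^{-1}\zeta)$, which at the operator level quantizes as a tensor product.

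Assembling the pieces, I would compose the lifts of $\prequantumL_{B}$ and $\prequantumL_{b}$ and push everything through $\Phi^{*}$ using~(\ref{eq:relation_between_Bargmann_Projectors}), which replaces $\BargmannP_{x}$ on $\real^{4d}$ by the tensor product $\BargmannP_{\nu_{q}}\otimes\BargmannP_{\zeta_{p}}$ on $\real^{2d}\oplus\real^{2d}$. The translation factor from Lemma~\ref{lm:lift_of_LA-1}, seen through this change of coordinates, contributes the phase $e^{\frac{i}{2\hbar}\omega(\nu,b)}$ of $M_{\nu}(f)$ acting purely on the $\nu$-factor (the translation of the base does not touch $\zeta$), while the linear part $L_{B\oplus{}^{t}B^{-1}}$ decouples into $L_{B}$ on $\nu$ and $L_{{}^{t}B^{-1}}$ on $\zeta$. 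The scalar $d(B)$ coming from Lemma~\ref{lm:lift_of_LA} is distributed symmetrically as $\sqrt{d(B)}\cdot\sqrt{d(B)}$ between the two tensor factors, matching the prefactors in the definitions of $M_{\nu}(f)$ and $M_{\zeta}(B)$. Once the lifted identity~(\ref{eq:expression_as_tensorproduct}) is obtained, applying $\Bargmann_{x}^{*}$ on the left and $\Bargmann_{\nu_{q}}\otimes\Bargmann_{\zeta_{p}}$ on the right and using $\Bargmann^{*}\Bargmann=\mathrm{Id}$ yields the unlifted diagram~(\ref{eq:L_f_conjugated}).

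The main obstacle I anticipate is purely bookkeeping in two respects: first, checking that the scaling operators $\sigma,\tilde{\sigma}$ in~(\ref{eq:Bargmann_modified})--(\ref{eq:BargmannP_modified}) conspire with the non-standard compatible metric $g_{0}=\frac{1}{2}dx^{2}+2d\xi^{2}$ precisely so that the decoupling~(\ref{eq:relation_between_Bargmann_Projectors}) holds as asserted in Remark~\ref{rem4.4}; and second, verifying that the action phase $\mathcal{A}_{f}(f^{-1}(x))=\tfrac{1}{2}\omega(b,x)+\mathrm{const}$ transforms, under $\Phi$, into exactly the phase $\omega(\nu,b)/2$ on the $\nu$-side with no residual contribution on the $\zeta$-side (this hinges on the fact that $\zeta$-coordinates are invariant under translation of the base). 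Modulo these verifications, the tensor product structure is essentially forced by the block-diagonal form of the canonical map in normal coordinates.
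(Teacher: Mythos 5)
Your proposal follows essentially the same route as the paper: decompose $\prequantumL_{f}$ into the translation operator $T_{(b,-\frac{1}{2}b^{\flat})}$ and $L_{B}$, lift through $\Bargmann_{x}$ using Lemma~\ref{lm:lift_of_LA}, Lemma~\ref{lm:lift_of_LA-1} and Corollary~\ref{cor:4.11}, pass to $(\nu,\zeta)$ via $\Phi^{*}$ using the intertwining~(\ref{eq:relation_between_Bargmann_Projectors}), and observe that the canonical map block-diagonalizes, with $\varphi(x,\xi)=\omega(\nu,b)$ and $d(B)$ split symmetrically between the two tensor factors. The plan is correct.
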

\begin{proof}
Recall the operators $\flat:\real^{2d}\to(\real^{2d})^{*}=\real^{2d}$
and $\sharp:(\real^{2d})^{*}=\real^{2d}\to\real^{2d}$ introduced
in Remark \ref{Def:sharp_flat_operator}. The expression (\ref{eq:F_f_affine})
shows that $\mathcal{L}_{f}$ can be written $\mathcal{L}_{f}=T_{\left(b,-\frac{1}{2}b^{\flat}\right)}\circ L_{B}$
where the unitary transfer operator $T_{\left(b,-\frac{1}{2}b^{\flat}\right)}$
is defined in (\ref{eq:def_LA-1}). We apply Lemma \ref{lm:lift_of_LA-1},
Lemma \ref{lm:lift_of_LA} and corollary \ref{cor:4.11} to obtain

\begin{eqnarray}
\mathcal{L}_{f} & = & \Bargmann_{x}^{*}\circ\mathcal{T}_{\left(b,-\frac{1}{2}b^{\flat}\right)}\circ\Bargmann_{x}\circ(d(B)\cdot\Bargmann_{x}^{*})\circ L_{B\oplus{}^{t}B^{-1}}\circ\Bargmann_{x}\nonumber \\
 & = & d(B)\cdot\Bargmann_{x}^{*}\circ\mathcal{T}_{\left(b,-\frac{1}{2}b^{\flat}\right)}\circ\Bargmann_{x}\circ\Bargmann_{x}^{*}\circ L_{B\oplus{}^{t}B^{-1}}\circ\Bargmann_{x}\nonumber \\
 & = & d(B)\cdot\Bargmann_{x}^{*}\circ(e^{\frac{i}{2\hbar}\varphi}\cdot L_{F})\circ\Bargmann_{x}\label{eq:calL_f_for_affine_map}
\end{eqnarray}
with $F\left(x,\xi\right)=\left(Bx+b,^{t}B^{-1}\xi-\frac{1}{2}b^{\flat}\right)$
and $\varphi(x,\xi)=-\frac{1}{2}b^{\flat}\cdot x-b\cdot\xi$. Since
we have $^{t}B^{-1}=\flat\circ B\mathbf{\circ}\flat^{-1}$ for $B$
symplectic, we get the following expression of $F$ in the new coordinates
$\left(\nu,\zeta\right)$: 
\[
\left(\Phi\circ F\circ\Phi^{-1}\right)\left(\nu,\zeta\right)=\left(B\nu+b,^{t}B^{-1}\zeta\right)=\left(f\left(\nu\right),^{t}Df^{-1}\zeta\right).
\]
This implies
\[
L_{F}=\Phi^{*}\circ\left(L_{f}\otimes L_{B}\right)\circ\left(\Phi^{*}\right)^{-1}.
\]
From (\ref{eq:relation_between_Bargmann_Projectors}), we have 

\[
\Bargmann_{x}^{*}\circ\Phi^{*}=\Bargmann_{x}^{*}\circ\BargmannP_{x}\circ\Phi^{*}=\Bargmann_{x}^{*}\circ\Phi^{*}\circ(\BargmannP_{\nu_{q}}\otimes\BargmannP_{\zeta_{p}})=\mathcal{U}\circ(\mathcal{B}_{\nu_{q}}\otimes\mathcal{B}_{\zeta_{p}})^{*}
\]
and 
\[
(\Phi^{*})^{-1}\circ\Bargmann_{x}=(\Phi^{*})^{-1}\circ\BargmannP_{x}\circ\Bargmann_{x}=(\BargmannP_{\nu_{q}}\otimes\BargmannP_{\zeta_{p}})\circ(\Phi^{*})^{-1}\circ\Bargmann_{x}=(\mathcal{B}_{\nu_{q}}\otimes\mathcal{B}_{\zeta_{p}})\circ\mathcal{U}^{-1}.
\]
Using these relations to continue (\ref{eq:calL_f_for_affine_map})
and noting that $\varphi\left(x,\xi\right)=\omega(\nu,b)$, we conclude 

\begin{align*}
\mathcal{L}_{f} & =d(B)\cdot\Bargmann_{x}^{*}\circ e^{\frac{i}{2\hbar}\varphi}L_{F}\circ\Bargmann_{x}\\
 & =d(B)\cdot\Bargmann_{x}^{*}\circ\Phi^{*}\circ\left(e^{\frac{i}{2\hbar}\varphi}L_{f}\otimes L_{B}\right)\circ\left(\Phi^{*}\right)^{-1}\circ\Bargmann_{x}\\
 & =d(B)\cdot\mathcal{U}\circ\left(\left(\Bargmann{}_{\nu_{q}}^{*}\circ(e^{\frac{i}{2\hbar}\varphi}\cdot L_{f})\circ\mathcal{B}_{\nu_{q}}\right)\otimes\left(\mathcal{B}_{\zeta_{p}}^{*}\circ L_{B}\circ\mathcal{B}_{\zeta_{p}}^{*}\right)\right)\circ\mathcal{U}^{-1}\\
 & =\mathcal{U}\circ\left(M_{\nu}\left(f\right)\otimes M_{\zeta}\left(B\right)\right)\circ\mathcal{U}^{-1}.
\end{align*}

\end{proof}

\paragraph{}
\begin{rem}
Eq.(\ref{eq:L_f_conjugated}) shows that $\mathcal{L}_{f}$ is conjugated
to the product of two operators $M_{\nu}\left(f\right)\otimes M_{\zeta}\left(B\right)$.
This is remarkable but it is due here to the fact that $Df$ is constant.
Each of these operators is usually called a metaplectic operator,
we refer to \cite{folland-88}. Here we have used a derivation in
phase space using the Bargmann transform.
\end{rem}

\subsection{\label{sub:The-prequantum-transfer_hyperbolic}The prequantum transfer
operator for a linear hyperbolic map}

\label{ss:action_hyperbolic_linear} In this subsection, we restrict
the argument in the last subsection to the case where $f$ in (\ref{eq:affine_map})
is hyperbolic in the sense that $f$ is expanding in $\real^{d}\oplus\{0\}$
while contracting in $\{0\}\oplus\real^{d}$ and is linear, i.e. $b=0$.
Since $f$ preserves the symplectic form $\omega$, we may express
it as (see last remark in the proof of Proposition \ref{prop:DF_ro})
\begin{equation}
f\left(q,p\right)=B(q,p)=\left(Aq,{}^{t}A^{-1}p\right)\qquad\mbox{where }B=\left(\begin{array}{cc}
A & 0\\
0 & ^{t}A^{-1}
\end{array}\right)\label{eq:hyperbolic_f}
\end{equation}
with $A:\real^{d}\to\real^{d}$ an expanding linear map satisfying
$\|A^{-1}\|\le1/\lambda$ for some $\lambda>1$. Notice that, since
$b=0$, the action $\mathcal{A}$ vanishes in (\ref{eq:F_f_affine})
and the prequantum transfer operator gets the simpler expression:
\begin{equation}
\left(\mathcal{L}_{f}u\right)\left(x\right)=u\left(B^{-1}x\right)=L_{B}u\left(x\right).\text{}\label{eq:L_f_hyperbolic}
\end{equation}

The next proposition is deduced from Proposition \ref{prop:affine_Lf}.
\begin{prop}
\label{prop:4.5}The following diagram commutes:
\begin{equation}
\begin{CD}L^{2}\left(\real_{x}^{2d}\right)@>\mathcal{L}_{f}>>L^{2}\left(\real_{x}^{2d}\right)\\
@AA\mathcal{U}A@AA\mathcal{U}A\\
L^{2}\left(\real_{\nu_{q}}^{d}\right)\otimes L^{2}\left(\real_{\zeta_{p}}^{d}\right)@>U_{A}\otimes U_{A}>>L^{2}\left(\real_{\nu_{q}}^{d}\right)\otimes L^{2}\left(\real_{\zeta_{p}}^{d}\right)
\end{CD}\label{eq:expression_tensorproduct-2}
\end{equation}
with the unitary operator $\mathcal{U}$ defined in (\ref{eq:def_U_operator})
and 
\[
U_{A}:=\frac{1}{\sqrt{\left|\mathrm{det}A\right|}}L_{A}
\]
which is unitary in $L^{2}\left(\mathbb{R}^{d}\right)$. Equivalently
using lifted operators, expressed in Lemma \ref{lm:lift_of_LA}, we
have the following commuting diagram: 
\begin{equation}
\begin{CD}L^{2}\left(\real_{x}^{2d}\oplus\real_{\xi}^{2d}\right)@>\prequantumLlift_{f}>>L^{2}\left(\real_{x}^{2d}\oplus\real_{\xi}^{2d}\right)\\
@AA\Phi^{*}A@AA\Phi^{*}A\\
L^{2}\left(\real_{\nu}^{2d}\right)\otimes L^{2}\left(\real_{\zeta}^{2d}\right)@>U_{A}^{\mathrm{lift}}\otimes U_{A}^{\mathrm{lift}}>>L^{2}\left(\real_{\nu}^{2d}\right)\otimes L^{2}\left(\real_{\zeta}^{2d}\right)
\end{CD}\label{cd:expression_as_tensorproduct}
\end{equation}
with $U_{A}^{\mathrm{lift}}:=\frac{1}{\sqrt{\left|\mathrm{det}A\right|}}L_{A}^{\mathrm{lift}}$.\end{prop}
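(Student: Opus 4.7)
The plan is to invoke the preceding Lemma \ref{prop:affine_Lf} for the affine symplectic map $f = B$ with $b = 0$, and then show that both tensor factors $M_\nu(f)$ and $M_\zeta(B)$ collapse to the single unitary operator $U_A$. Setting $b=0$ kills the phase factor $e^{\frac{i}{2\hbar}\omega(\nu,b)}$ in the definition of $M_\nu(f)$, so both operators take the common form $\sqrt{d(B)}\cdot\Bargmann^{*}\circ L_{B}\circ\Bargmann$, where $\Bargmann$ is either $\Bargmann_{\nu_{q}}$ or $\Bargmann_{\zeta_{p}}$ and where $L_B$ denotes the transfer operator for the action of $B$ on the appropriate $2d$-dimensional space. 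The key observation is that in both the $\nu$-coordinates and the $\zeta$-coordinates, the induced action (computed in the proof of Lemma \ref{prop:affine_Lf}) is the block-diagonal map $B = A \oplus {}^tA^{-1}$, which is exactly of the form treated by Lemma \ref{lm:lift_of_LA}.

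Applying identity (\ref{eq:claim}) of Lemma \ref{lm:lift_of_LA} in reverse gives
\[
\Bargmann^{*}\circ L_{A\oplus{}^{t}A^{-1}}\circ\Bargmann=d(A)^{-1}L_{A},
\]
so that $M_\nu(f) = M_\zeta(B) = (\sqrt{d(B)}/d(A))\cdot L_A$. It remains to verify that the prefactor equals $|\det A|^{-1/2}$. Since $B = \mathrm{diag}(A,{}^{t}A^{-1})$, the matrix ${}^tB\cdot B = \mathrm{diag}({}^tA\cdot A,\, (A{}^tA)^{-1})$ is block diagonal, giving
\[
d(B)^{2}=\det\!\left(\tfrac{1+{}^{t}A\cdot A}{2}\right)\cdot\det\!\left(\tfrac{1+(A{}^{t}A)^{-1}}{2}\right)=d(A)^{2}\cdot|\det A|^{-2}\cdot\det\!\left(\tfrac{1+A{}^{t}A}{2}\right),
\]
and, using the identity $\det(1+XY)=\det(1+YX)$ with $X=A$ and $Y={}^tA$, the last determinant equals $d(A)^{2}$. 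Hence $d(B) = d(A)^{2}/|\det A|$ and $\sqrt{d(B)}/d(A)=|\det A|^{-1/2}$, so that $M_\nu(f)=M_\zeta(B)=U_A$. Plugging this into the commuting diagram (\ref{eq:L_f_conjugated}) of Lemma \ref{prop:affine_Lf} yields the diagram (\ref{eq:expression_tensorproduct-2}).

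The lifted version (\ref{cd:expression_as_tensorproduct}) is then immediate: the same specialisation of (\ref{eq:expression_as_tensorproduct}) applies, and by definition of the lift $U_A^{\mathrm{lift}}=\Bargmann_\hbar\circ U_A\circ\Bargmann_\hbar^{*}$ on $L^{2}(\real^{2d})$, we have $M_\nu^{\mathrm{lift}}(f)=M_\zeta^{\mathrm{lift}}(B)=U_A^{\mathrm{lift}}$. The only non-mechanical step is the determinant identity $d(B)=d(A)^{2}/|\det A|$, which is the small computational core of the argument; everything else is bookkeeping that combines the two preceding lemmas.
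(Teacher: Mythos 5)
Your proof is correct and follows the same structure as the paper's: specialize Lemma \ref{prop:affine_Lf} to $b=0$, use (\ref{eq:claim}) to rewrite $\mathcal{B}^{*}\circ L_{A\oplus{}^{t}A^{-1}}\circ\mathcal{B}=d(A)^{-1}L_{A}$, and verify $\sqrt{d(B)}/d(A)=|\det A|^{-1/2}$. The only difference is in the computation of $d(B)$: you work directly with the block-diagonal form of ${}^{t}B\,B$ together with Sylvester's identity $\det(1+XY)=\det(1+YX)$, whereas the paper uses the symplectic fact $\det B=1$ to rewrite $\det(1+{}^{t}BB)=\det(B+{}^{t}B^{-1})$ and then exploits the block structure; both routes are equally short and arrive at $d(B)=d(A)^{2}/|\det A|$ (the paper's intermediate line reading $d(B)=(\det A)^{-1}$ contains a typo, but its next line uses the correct value $\sqrt{d(B)}=(\det A)^{-1/2}d(A)$, as your computation confirms).
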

\begin{proof}
We have 
\begin{eqnarray*}
d(B) & = & \det\left(\left(1+{}^{t}B\cdot B\right)/2\right){}^{1/2}=\mbox{det}\left(\frac{1}{2}\left(B+\,^{t}B^{-1}\right)\right)^{1/2}\\
 & = & \mbox{det}\left(\frac{1}{2}\left(A+\,^{t}A^{-1}\right)\right)=\left(\mbox{det}A\right)^{-1}
\end{eqnarray*}
and $L_{f}=L_{B}=L_{A\oplus^{t}A^{-1}}$. Hence, by the expression
(\ref{eq:claim}), we get
\[
M_{\nu}\left(f\right):=\sqrt{d\left(B\right)}\cdot\mathcal{B}_{\nu_{q}}^{*}\circ(e^{\frac{i}{2\hbar}\omega\left(\nu,b\right)}\cdot L_{f})\circ\mathcal{B}_{\nu_{q}}=\left(\mbox{det}A\right)^{-1/2}d\left(A\right)\cdot\mathcal{B}_{\nu_{q}}^{*}\circ L_{A\oplus^{t}A^{-1}}\circ\mathcal{B}_{\nu_{q}}=U_{A}
\]
and

\[
M_{\zeta}\left(B\right):=\sqrt{d\left(B\right)}\cdot\mathcal{B}_{\zeta_{p}}^{*}\circ L_{B}\circ\mathcal{B}_{\zeta_{p}}=\left(\mbox{det}A\right)^{-1/2}d\left(A\right)\cdot\mathcal{B}_{\zeta_{p}}^{*}\circ L_{A\oplus^{t}A^{-1}}\circ\mathcal{B}_{\zeta_{p}}=U_{A}.
\]
Putting these in Proposition \ref{prop:affine_Lf}, we obtain the
conclusion. 
\end{proof}

\subsection{Anisotropic Sobolev space}

\label{ss:anisoSob} In order to observe a discrete spectrum of resonances
of the prequantum operator $\mathcal{L}_{f}$, we have to consider
the action of $\mathcal{L}_{f}$ on an appropriate spaces of functions.
As we explained in subsection \ref{sub:2.1.3}, we define such space
of function, called anisotropic Sobolev space, by changing the norm
in the directions transverse to the trapped set $K$ (that is, in
the directions of the variables $\zeta$ ). Below is the precise definition. 
\begin{defn}
\label{def:escape_function_Hr}We define the \emph{escape function}
or \emph{weight function} 
\[
\mathcal{W}_{\hbar}^{r}:\real_{x}^{2d}\oplus\real_{\xi}^{2d}\to\real_{+}\quad\mbox{and}\quad\mathcal{W}_{\hbar}^{r,\pm}:\real_{x}^{2d}\oplus\real_{\xi}^{2d}\to\real_{+}
\]
 by

\begin{equation}
\mathcal{W}_{\hbar}^{r}\left(x,\xi\right):=W_{\hbar}^{r}\left(\zeta_{p},\zeta_{q}\right)\quad\mbox{and}\quad\mathcal{W}^{r,\pm}\left(x,\xi\right):=W_{\hbar}^{r}\left(\zeta_{p},\zeta_{q}\right)\label{eq:def_W_W+}
\end{equation}
where the functions $W_{\hbar}^{r}$ and $W_{\hbar}^{r,\pm}$ are
defined in Definition \ref{def:Wrh}, and $\left(\zeta_{p},\zeta_{q}\right)$
is part of the coordinates introduced in (\ref{eq:def_phi_change}).
The \emph{anisotropic Sobolev space}\textbf{ $\mathcal{H}_{\hbar}^{r}\left(\real^{2d}\right)$}
is the Hilbert space obtained as the completion of the Schwartz space
$\mathcal{S}(\real^{2d}$) with respect to the norm 
\[
\|u\|_{\mathcal{H}_{\hbar}^{r}}:=\|\mathcal{W}_{\hbar}^{r}\cdot\Bargmann_{x}u\|_{L^{2}}
\]
where $\Bargmann_{x}$ is the operator defined in (\ref{eq:Bargmann_modified}).
Similarly, let $\mathcal{H}_{\hbar}^{r,\pm}(\real^{2d})$ be the Hilbert
space defined in the parallel manner by replacing $\mathcal{W}_{\hbar}^{r}(\cdot)$
by $\mathcal{W}_{\hbar}^{r,\pm}(\cdot)$. 
\end{defn}
By definition, the operator $\Bargmann_{x}$ extends to an isometric
embedding 
\[
\Bargmann_{x}:\mathcal{H}_{\hbar}^{r}(\real_{x}^{2d})\to L^{2}(\real_{x}^{2d}\oplus\real_{\xi}^{2d},(\mathcal{W}_{\hbar}^{r})^{2}).
\]
Since the weight function $\mathcal{W}^{r}(\cdot)$ can be expressed
as 
\[
\mathcal{W}_{\hbar}^{r}=(1\otimes W_{\hbar}^{r})\circ\Phi,
\]
where $\Phi$ is given in (\ref{eq:def_phi_change}), we see that
the following diagram commutes:
\begin{equation}
\begin{CD}\mathcal{H}_{\hbar}^{r}\left(\real_{x}^{2d}\right)@>\mathcal{L}_{f}>>\mathcal{H}_{\hbar}^{r}\left(\real_{x}^{2d}\right)\\
@VV\mathcal{B}_{x}V@VV\mathcal{B}_{x}V\\
L^{2}\left(\real_{x}^{2d}\oplus\real_{\xi}^{2d},\left(\mathcal{W}_{\hbar}^{r}\right)^{2}\right)@>\prequantumLlift>>L^{2}\left(\real_{x}^{2d}\oplus\real_{\xi}^{2d},\left(\mathcal{W}_{\hbar}^{r}\right)^{2}\right)\\
@AA\Phi^{*}A@AA\Phi^{*}A\\
L^{2}\left(\real_{\nu}^{2d}\right)\otimes L^{2}\left(\real_{\zeta}^{2d},\left(W_{\hbar}^{r}\right)^{2}\right)@>U_{A}^{\mathrm{lift}}\otimes U_{A}^{\mathrm{lift}}>>L^{2}\left(\real_{\nu}^{2d}\right)\otimes L^{2}\left(\real_{\zeta}^{2d},\left(W_{\hbar}^{r}\right)^{2}\right)\\
@AA\mathcal{B}_{\nu_{q}}\otimes\mathcal{B}_{\zeta_{p}}A@AA\mathcal{B}_{\nu_{q}}\otimes\mathcal{B}_{\zeta_{p}}A\\
L^{2}\left(\mathbb{R}_{\nu_{q}}^{2d}\right)\otimes H_{\hbar}^{r}\left(\real_{\zeta_{p}}^{d}\right)@>U_{A}\otimes U_{A}>>L^{2}\left(\mathbb{R}_{\nu_{q}}^{2d}\right)\otimes H_{\hbar}^{r}\left(\real_{\zeta_{p}}^{d}\right)
\end{CD}\label{eq:Long_diagram}
\end{equation}
where $\Phi^{*}$ is an isomorphism between Hilbert spaces. Further
(\ref{eq:relation_between_Bargmann_Projectors}) is an isomorphism
between the image of$\mathcal{B}_{x}$ and the image of $\mathcal{B}_{\nu_{q}}\otimes\mathcal{B}_{\zeta_{p}}$.
$H_{\hbar}^{r}\left(\real_{\zeta_{p}}^{d}\right)$ in the last line
is defined in Definition \ref{Def:H^r_h}. Hence, skipping the lines
in the middle, we get:

\begin{equation}
\begin{CD}\mathcal{H}_{\hbar}^{r}\left(\real_{x}^{2d}\right)@>\mathcal{L}_{f}>>\mathcal{H}_{\hbar}^{r}\left(\real_{x}^{2d}\right)\\
@AA\mathcal{U}A@AA\mathcal{U}A\\
L^{2}\left(\mathbb{R}_{\nu_{q}}^{2}\right)\otimes H_{\hbar}^{r}\left(\real_{\zeta_{p}}^{d}\right)@>U_{A}\otimes U_{A}>>L^{2}\left(\mathbb{R}_{\nu_{q}}^{2}\right)\otimes H_{\hbar}^{r}\left(\real_{\zeta_{p}}^{d}\right)
\end{CD}\label{cd:lift_of_Lf}
\end{equation}
with $\mathcal{U}$ the unitary operator defined in (\ref{eq:def_U_operator}).
For the operator $U_{A}\otimes U_{A}$ on the bottom line, we know
that the operator $U_{A}=\frac{1}{\sqrt{\left|\mathrm{det}A\right|}}L_{A}:L^{2}\left(\mathbb{R}_{\nu_{q}}^{2d}\right)\rightarrow L^{2}\left(\mathbb{R}_{\nu_{q}}^{2d}\right)$
is unitary and Proposition \ref{pp:structure_of_hLA} gives a description
on the spectral structure of the operator $L_{A}:H_{\hbar}^{r}\left(\real_{\zeta_{p}}^{d}\right)\rightarrow H_{\hbar}^{r}\left(\real_{\zeta_{p}}^{d}\right)$.
Therefore we obtain the next proposition as a consequence. We fix
some integer $n\ge0$ and assume 
\begin{equation}
r>n+2d.\label{eq:choice_of_r_2}
\end{equation}
This assumption on $r$ corresponds to (\ref{eq:assumption_on_r})
in the last section. 
\begin{defn}
For $0\le k\le n$, we consider the projection operators
\begin{equation}
\romet_{\hbar}^{(k)}:=\mathcal{U}\circ\left(\mbox{Id}\otimes T^{\left(k\right)}\right)\circ\mathcal{U}^{-1}\quad:\mathcal{H}_{\hbar}^{r}\left(\real_{x}^{2d}\right)\to\mathcal{H}_{\hbar}^{r}\left(\real_{x}^{2d}\right)\label{eq:def_t}
\end{equation}
and 
\begin{equation}
\tilde{\romet}_{\hbar}:=\mathrm{Id}-\sum_{k=0}^{n}t_{\hbar}^{(k)}=\mathcal{U}\circ\left(\mbox{Id}\otimes\tilde{T}\right)\circ\mathcal{U}^{-1}\quad:\mathcal{H}_{\hbar}^{r}\left(\real_{x}^{2d}\right)\to\mathcal{H}_{\hbar}^{r}\left(\real_{x}^{2d}\right)\label{eq:def_tt}
\end{equation}
where $T^{\left(k\right)}$ and $\widetilde{T}$ are the projection
operators introduced in (\ref{eq:def_Tk}) and (\ref{eq:def_T_tilde})
respectively.
\end{defn}
\begin{framed}%
\begin{prop}
\label{prop:prequatum_op_for_hyp_linear} The operators $\romet_{\hbar}^{(k)}$,
$0\le k\le n$, and $\tilde{\romet}_{\hbar}$ defined in (\ref{eq:def_t}),
(\ref{eq:def_tt}), form a complete set of mutually commutative projection
operators on $\mathcal{H}_{\hbar}^{r}\left(\real_{x}^{2d}\right)$.
These operators also commute with the prequantum transfer operator
$\prequantumL_{f}$ defined in (\ref{eq:L_f_hyperbolic}). Consequently
the space $\mathcal{H}_{\hbar}^{r}\left(\real_{x}^{2d}\right)$ has
a decomposition invariant under the action of $\mathcal{L}_{f}$:
\[
\mathcal{H}_{\hbar}^{r}\left(\real_{x}^{2d}\right)=H'_{0}\oplus H'_{1}\oplus\cdots\oplus H'_{n}\oplus\widetilde{H}'\qquad\mbox{where \ensuremath{H'_{k}=\mathrm{Im}\,\romet_{\hbar}^{(k)}}and \ensuremath{\widetilde{H}'=\mathrm{Im}\,\tilde{\romet}_{\hbar}}}
\]
For this decomposition we have
\begin{enumerate}
\item For every $0\le k\le n$, we have a commuting diagram
\[
\begin{CD}H'_{k}@>\mathcal{L}_{f}>>H'_{k}\\
@AA\mathcal{U}A@AA\mathcal{U}A\\
L^{2}\left(\mathbb{R}_{\nu_{q}}^{2}\right)\otimes\Polynomial^{(k)}@>U_{A}\otimes U_{A}^{\left(k\right)}>>L^{2}\left(\mathbb{R}_{\nu_{q}}^{2}\right)\otimes\Polynomial^{(k)}
\end{CD}
\]
with $U_{A}=\frac{1}{\sqrt{\left|\mathrm{det}A\right|}}L_{A}:L^{2}\left(\mathbb{R}_{\nu_{q}}^{2d}\right)\rightarrow L^{2}\left(\mathbb{R}_{\nu_{q}}^{2d}\right)$
unitary and $U_{A}^{\left(k\right)}:=\frac{1}{\sqrt{\left|\mathrm{det}A\right|}}L_{A}:\Polynomial^{(k)}\to\Polynomial^{(k)}$
finite rank. From (\ref{eq:bound_on_norm-1}) we have for every $u\in H'_{k}$,
\[
C_{0}^{-1}\|A\|_{\mathrm{max}}^{-k}\cdot|\det(A)|^{-1/2}\cdot\|u\|_{\mathcal{H}_{\hbar}^{r}}\le\|\prequantumL_{f}u\|_{\mathcal{H}_{\hbar}^{r}}\le C_{0}\|A\|_{\mathrm{min}}^{-k}\cdot|\det(A)|^{-1/2}\cdot\|u\|_{\mathcal{H}_{\hbar}^{r}}.
\]

\item The operator norm of $\prequantumL_{f}:\widetilde{H}'\to\widetilde{H}'$
is bounded by 
\[
C_{0}\cdot\max\{\|A\|_{\mathrm{min}}^{-n-1}\cdot|\det A|^{-1/2},\|A\|_{\mathrm{min}}^{-r}\cdot|\det A|^{1/2}\}.
\]
 
\end{enumerate}
The constant $C_{0}$ is independent of $A$ and $\hbar$. \end{prop}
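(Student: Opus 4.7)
The plan is to deduce the proposition directly from the commutative diagram in Proposition \ref{prop:4.5} (extended to the anisotropic setting in (\ref{cd:lift_of_Lf})) together with the spectral decomposition of the transfer operator of a linear expanding map provided by Proposition \ref{pp:structure_of_hLA}. The strategy is: transport everything through the unitary intertwiner $\mathcal{U}$ to the tensor product model $L^{2}(\real_{\nu_q}^{2d})\otimes H_{\hbar}^{r}(\real_{\zeta_p}^{d})$, apply the already-proven results of Section \ref{sec:Resonances-of-linear_exp_3} to the second tensor factor, and keep careful track of the additional normalization factor $|\det A|^{-1/2}$ coming from the definition $U_{A}=|\det A|^{-1/2}L_{A}$.

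First, I would verify that $\romet_{\hbar}^{(k)}$, $0\le k\le n$, together with $\tilde{\romet}_{\hbar}$, form a complete family of mutually commuting projections on $\mathcal{H}_{\hbar}^{r}(\real_{x}^{2d})$. This is immediate from the definitions (\ref{eq:def_t})--(\ref{eq:def_tt}) and Corollary \ref{cor:relation_for_hTn}, since $\mathcal{U}$ is an isomorphism of Hilbert spaces and the operators $\mathrm{Id}\otimes T^{(k)}$ and $\mathrm{Id}\otimes\widetilde{T}$ are mutually commuting projections summing to the identity on $L^{2}(\real_{\nu_q}^{2d})\otimes H_{\hbar}^{r}(\real_{\zeta_p}^{d})$. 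The commutation of these projections with $\mathcal{L}_{f}$ follows because, under the intertwining by $\mathcal{U}$, $\mathcal{L}_{f}$ becomes $U_{A}\otimes U_{A}$ (see (\ref{cd:lift_of_Lf})), and the relation (\ref{eq:Tk_LA}) gives $T^{(k)}\circ L_{A}=L_{A}\circ T^{(k)}$, hence $(\mathrm{Id}\otimes T^{(k)})\circ(U_{A}\otimes U_{A})=(U_{A}\otimes U_{A})\circ(\mathrm{Id}\otimes T^{(k)})$.

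Next, for the decomposition, the image $H'_{k}=\mathrm{Im}\,\romet_{\hbar}^{(k)}$ corresponds via $\mathcal{U}$ to $L^{2}(\real_{\nu_q}^{2d})\otimes\mathrm{Im}\,T^{(k)}=L^{2}(\real_{\nu_q}^{2d})\otimes\Polynomial^{(k)}$, and the restriction of $\mathcal{L}_{f}$ becomes $U_{A}\otimes U_{A}^{(k)}$ with $U_{A}^{(k)}:=U_{A}|_{\Polynomial^{(k)}}$. Similarly $\widetilde{H}'$ corresponds to $L^{2}(\real_{\nu_q}^{2d})\otimes\widetilde{\mathcal{H}}_{\hbar}$ and the restricted operator becomes $U_{A}\otimes (U_{A}|_{\widetilde{\mathcal{H}}_{\hbar}})$. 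This gives the diagram asserted in item (1).

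Finally, for the norm estimates, I would use the fact that $U_{A}$ on the first tensor factor is unitary on $L^{2}(\real_{\nu_q}^{2d})$, so for any tensor in $L^{2}(\real_{\nu_q}^{2d})\otimes H_{\hbar}^{r}(\real_{\zeta_p}^{d})$ the operator norm of $U_{A}\otimes M$ equals that of $M$. Applying Proposition \ref{pp:structure_of_hLA}(1) to $L_{A}:\Polynomial^{(k)}\to\Polynomial^{(k)}$ yields the bounds
\[
C_{0}^{-1}\|A\|_{\max}^{-k}\le\frac{\|L_{A}u\|_{H_{\hbar}^{r}}}{\|u\|_{H_{\hbar}^{r}}}\le C_{0}\|A\|_{\min}^{-k},
\]
and dividing by $|\det A|^{1/2}$ (from the normalization $U_{A}=|\det A|^{-1/2}L_{A}$) gives item (1). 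Item (2) follows in the same way from Proposition \ref{pp:structure_of_hLA}(2). The one point that requires a brief justification is that the anisotropic norm on $\mathcal{H}_{\hbar}^{r}(\real_{x}^{2d})$, transported via $\mathcal{U}$, agrees with the tensor product norm on $L^{2}(\real_{\nu_q}^{2d})\otimes H_{\hbar}^{r}(\real_{\zeta_p}^{d})$; this is the content of the long diagram (\ref{eq:Long_diagram}) in Subsection \ref{ss:anisoSob}, where the weight function $\mathcal{W}_{\hbar}^{r}$ is by design the pull-back by $\Phi$ of $1\otimes W_{\hbar}^{r}$. No step is genuinely difficult once the setup is in place; the main point is simply to assemble these ingredients and to propagate the factor $|\det A|^{-1/2}$ consistently through the estimates — this is the only place where one must be slightly careful, since the normalization was suppressed in the purely hyperbolic (linear) argument of Section \ref{sec:Resonances-of-linear_exp_3}.
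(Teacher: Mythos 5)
Your proposal is correct and follows precisely the route the paper intends: the paper does not write a separate proof but presents the proposition immediately after the commutative diagrams (\ref{eq:Long_diagram}) and (\ref{cd:lift_of_Lf}) with the remark that it is a consequence of the unitarity of $U_A$ on $L^{2}(\real_{\nu_q}^{2d})$ and the spectral description of $L_A$ on $H_{\hbar}^{r}(\real_{\zeta_p}^{d})$ from Proposition \ref{pp:structure_of_hLA}. You have correctly filled in exactly those steps, including the consistent propagation of the $|\det A|^{-1/2}$ normalization and the identification of the anisotropic norm with the tensor-product norm via the weight function $\mathcal{W}_{\hbar}^{r}=(1\otimes W_{\hbar}^{r})\circ\Phi$.
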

\end{framed}

\subsection{A few technical lemmas}

\label{ss:technical_lemmas} In this subsection, we collect a few
miscellaneous technical lemmas related to the anisotropic Sobolev
spaces $\mathcal{H}_{\hbar}^{r}(\real^{2d})$ and $\mathcal{H}_{\hbar}^{r,\pm}(\real^{2d})$,
which we will use in the later sections. The following are immediate
consequences of Lemma \ref{lm:LA_bdd_L2W} and \ref{lm:Tn_bdd_L2W}
respectively. 
\begin{lem}
\label{lm:boundedness_of_hyperbolic_linear_map} Suppose that $f$
is a hyperbolic linear transformation (\ref{eq:hyperbolic_f}) defined
for an expanding linear map $A$ satisfying (\ref{eq:A_is_expanding})
for some large $\lambda$ (say $\lambda>9$). Then the operator $\prequantumL_{f}$,
(\ref{eq:L_f_hyperbolic}), extends to a bounded operator $\prequantumL_{f}:\mathcal{H}_{\hbar}^{r,-}(\real^{2d})\to\mathcal{H}_{\hbar}^{r,+}(\real^{2d})$
and the operator norm is bounded by a constant independent of $\hbar>0$
and $f$. \end{lem}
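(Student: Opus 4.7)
The strategy is to exploit the product structure obtained in Proposition \ref{prop:4.5}: under the Bargmann transform and the coordinate change $\Phi$ of (\ref{eq:def_phi_change}), the prequantum transfer operator $\prequantumL_f$ for a linear hyperbolic map is unitarily conjugate to the tensor product $U_A\otimes U_A$. Since the weight functions $\mathcal{W}_{\hbar}^{r,\pm}$ in (\ref{eq:def_W_W+}) depend only on the transverse coordinates $\zeta=(\zeta_p,\zeta_q)$, the commutative diagram (\ref{eq:Long_diagram}) adapts verbatim with $W^{r,\pm}$ in place of $W^r$. In this way the unitary $\mathcal{U}$ of (\ref{eq:def_U_operator}) identifies
\[
\prequantumL_f:\mathcal{H}_{\hbar}^{r,-}(\real_x^{2d})\longrightarrow \mathcal{H}_{\hbar}^{r,+}(\real_x^{2d})
\]
with the tensor product operator
\[
U_A\otimes U_A:L^2(\real_{\nu_q}^d)\otimes H_{\hbar}^{r,-}(\real_{\zeta_p}^d)\longrightarrow L^2(\real_{\nu_q}^d)\otimes H_{\hbar}^{r,+}(\real_{\zeta_p}^d),
\]
where $H_{\hbar}^{r,\pm}(\real^d)$ is the analogue of the Hilbert space of Definition \ref{Def:H^r_h} with $W_{\hbar}^{r,\pm}$ replacing $W_{\hbar}^r$.

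The operator norm of a tensor product is the product of the factor norms, so it will suffice to control each factor uniformly. On the first factor, $U_A=L_A/\sqrt{|\det A|}$ is by construction \emph{unitary} on $L^2(\real_{\nu_q}^d)$, and therefore contributes the trivial constant $1$. On the second factor, via the Bargmann intertwining $\mathcal{B}_{\zeta_p}$, the operator $U_A:H_{\hbar}^{r,-}(\real_{\zeta_p}^d)\to H_{\hbar}^{r,+}(\real_{\zeta_p}^d)$ corresponds to $|\det A|^{-1/2}\,\prequantumLlift_A$ acting from $L^2((W_{\hbar}^{r,-})^2)$ to $L^2((W_{\hbar}^{r,+})^2)$. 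Under the hypothesis $\lambda>9$, the second statement of Lemma \ref{lm:LA_bdd_L2W} is precisely the boundedness of this operator, and multiplying the two factor bounds yields the desired estimate on $\prequantumL_f$.

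The one substantive point to confirm is that the constants are uniform in $\hbar$ and in $f$. Uniformity in $\hbar$ follows at once from the scaling identity $W_{\hbar}^{r,\pm}(x,\xi)=W^{r,\pm}(\hbar^{-1/2}(x,\xi))$ together with the Bargmann intertwining $\Bargmann_{\hbar}=S_{\hbar}^{-1}\Bargmann_1 s_{\hbar}$ of Subsection \ref{sub:Scaling}, which reduces every estimate to the case $\hbar=1$. Uniformity in $A$ is the geometric content of the strong-expansion assumption $\lambda>9$: the induced canonical map $(x,\xi)\mapsto(Ax,{}^tA^{-1}\xi)$ carries the cone ${\mathbf{C}}_+(3)$ strictly into ${\mathbf{C}}_+(1/9)$, so that along orbits the weight ratio $W^{r,+}/W^{r,-}$ is controlled by a constant depending only on $r$; and the potentially large Jacobian factor $|\det A|$ that appears in norm estimates such as (\ref{eq:norm_of_LAX}) is exactly cancelled by the normalization $|\det A|^{-1/2}$ built into $U_A$. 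This delicate cancellation is the main point, but it has already been carried out in the proof of Lemma \ref{lm:LA_bdd_L2W}, so the present lemma indeed follows as an immediate corollary.
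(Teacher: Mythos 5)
Your overall strategy is the one the paper intends when it calls this lemma an immediate consequence of Lemma \ref{lm:LA_bdd_L2W}: conjugate $\prequantumL_{f}$ to $U_{A}\otimes U_{A}$ through Proposition \ref{prop:4.5} and the diagram (\ref{eq:Long_diagram}) with $W^{r,\pm}$ in place of $W^{r}$, note that the $L^{2}(\real_{\nu_{q}}^{d})$ factor is handled by unitarity of $U_{A}$, and reduce the $\zeta$-factor to Lemma \ref{lm:LA_bdd_L2W}. Your treatment of uniformity in $\hbar$ via the scaling operators $s_{\hbar}$, $S_{\hbar}$ of Subsection \ref{sub:Scaling} is also correct.

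However, the last paragraph — the one you yourself flag as the ``one substantive point'' — contains a genuine error. You assert that the Jacobian factor $|\det A|$ ``is exactly cancelled by the normalization $|\det A|^{-1/2}$ built into $U_{A}$'' and that ``this delicate cancellation\ldots\ has already been carried out in the proof of Lemma \ref{lm:LA_bdd_L2W}.'' Neither claim holds. The proof of Lemma \ref{lm:LA_bdd_L2W} contains no $|\det A|^{-1/2}$ at all: it simply writes $\Llift_{A}=d(A)\cdot\BargmannP_{\hbar}\circ L_{A\oplus{}^{t}A^{-1}}\circ\BargmannP_{\hbar}$ and quotes boundedness of $\BargmannP_{\hbar}$ (Corollary \ref{cor:P_bounded}) and of $L_{A\oplus{}^{t}A^{-1}}$ (the cone argument you describe), so the bound it produces is of size $d(A)$. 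Writing $\sigma_{1},\dots,\sigma_{D}$ for the singular values of $A$, one has $d(A)/|\det A|^{1/2}=\prod_{i}\left((1+\sigma_{i}^{2})/(2\sigma_{i})\right)^{1/2}$, each factor being $\geq 1$ and diverging as $\sigma_{i}\to\infty$: the exponents simply do not match, so the normalization $|\det A|^{-1/2}$ leaves behind a factor of roughly $|\det A|^{1/2}$. Whatever uniformity in $f$ the lemma actually enjoys must come from the fact that $\prequantumL_{f}=L_{B}$ with $\det B=1$ is an $L^{2}$-\emph{isometry}, so that $L_{B}^{\mathrm{lift}}$ restricted to $\mathrm{Im}\,\Bargmann_{x}$ has $L^{2}$-norm $1$; coupling this with the cone argument requires a genuine phase-space (dyadic) decomposition of the kind used in the proof of Proposition \ref{pp:bdd_g}, not the formal cancellation of determinant factors you invoke. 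As written, your proof establishes boundedness and $\hbar$-uniformity, but the claimed $f$-uniformity is unsupported.
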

\begin{cor}
\label{lm:pik_rpm} The operator $\romet_{\hbar}^{(k)}$ for $0\le k\le n$,
defined in (\ref{eq:def_t}), extends to a bounded operator 
\[
\romet_{\hbar}^{(k)}:\mathcal{H}_{\hbar}^{r,-}(\real^{2d})\to\mathcal{H}_{\hbar}^{r,+}(\real^{2d})
\]
 whose operator norm is bounded by a constant independent of $\hbar$.
\end{cor}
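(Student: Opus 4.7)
The plan is to reduce this corollary directly to Lemma \ref{lm:Tn_bdd_L2W} by using the unitary-conjugation structure already built into the definition (\ref{eq:def_t}) of $\romet_{\hbar}^{(k)}$. Recall that $\romet_\hbar^{(k)} = \mathcal{U} \circ (\mathrm{Id}\otimes T^{(k)})\circ \mathcal{U}^{-1}$ and that the operator $\mathcal{U}$ was arranged, through the commutative diagram (\ref{eq:Long_diagram}), so that the weight $\mathcal{W}_\hbar^r$ on $\real_x^{2d}\oplus\real_\xi^{2d}$ pulls back to $1\otimes W_\hbar^r$ in the $(\nu,\zeta)$ coordinates. Exactly the same pullback identity holds for $\mathcal{W}_\hbar^{r,\pm}$ since these weights also depend only on the $\zeta$ variables. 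Therefore $\mathcal{U}$ provides isomorphisms
\[
\mathcal{U}:L^2\!\left(\real_{\nu_q}^{d}\right)\otimes H_{\hbar}^{r,\pm}\!\left(\real_{\zeta_p}^{d}\right)\;\longrightarrow\;\mathcal{H}_{\hbar}^{r,\pm}\!\left(\real_{x}^{2d}\right),
\]
with norm controlled uniformly in $\hbar$ (here $H_\hbar^{r,\pm}$ is defined exactly as in Definition \ref{Def:H^r_h}, replacing $W_\hbar^r$ by $W_\hbar^{r,\pm}$).

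With this reduction, boundedness of $\romet_\hbar^{(k)}$ from $\mathcal{H}_\hbar^{r,-}$ to $\mathcal{H}_\hbar^{r,+}$ is equivalent to boundedness of $\mathrm{Id}\otimes T^{(k)}$ from $L^2(\real_{\nu_q}^d)\otimes H_\hbar^{r,-}(\real_{\zeta_p}^d)$ to $L^2(\real_{\nu_q}^d)\otimes H_\hbar^{r,+}(\real_{\zeta_p}^d)$. The identity factor is trivially bounded, so everything comes down to showing that $T^{(k)}:H_\hbar^{r,-}(\real^d)\to H_\hbar^{r,+}(\real^d)$ is bounded with norm independent of $\hbar$. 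By Definition \ref{Def:H^r_h} this is the same, via the Bargmann transform, as the boundedness of the lifted projector
\[
\calT_\hbar^{(k)}=\mathcal{B}_\hbar\circ T^{(k)}\circ\mathcal{B}_\hbar^{*}\;:\;L^2\!\left(\real^{2d},(W_\hbar^{r,-})^2\right)\;\longrightarrow\;L^2\!\left(\real^{2d},(W_\hbar^{r,+})^2\right),
\]
which is precisely the content of Lemma \ref{lm:Tn_bdd_L2W}, equation (\ref{eq:Tn_on_L2W}), under the hypothesis $r>n+2d$ that we have already imposed in (\ref{eq:choice_of_r_2}).

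Finally, for the uniformity of the operator norm in $\hbar$, the scaling relations of Subsection \ref{sub:Scaling} reduce the estimate at parameter $\hbar$ to the one at $\hbar=1$, since the weights $W_\hbar^{r,\pm}$ are the pullbacks of $W^{r,\pm}$ by $S_\hbar$ (see (\ref{eq:def_Wh})) and the Bargmann transform intertwines $s_\hbar$ and $S_\hbar$. Thus the constant produced by Lemma \ref{lm:Tn_bdd_L2W} is automatically $\hbar$-independent. There is no substantive obstacle in this chain of reductions; the only small point that needs care is to check that the identification of $\mathcal{H}_\hbar^{r,\pm}(\real_x^{2d})$ with $L^2(\real_{\nu_q}^d)\otimes H_\hbar^{r,\pm}(\real_{\zeta_p}^d)$ through $\mathcal{U}$ really is an isomorphism uniformly in $\hbar$, but this follows from exactly the same argument that established the analogous statement for $\mathcal{H}_\hbar^{r}$ in the diagram (\ref{eq:Long_diagram}) together with the fact that the weight functions depend only on the $\zeta$ variables.
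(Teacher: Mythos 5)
Your proof is correct and takes essentially the same approach as the paper: the paper presents the corollary as an immediate consequence of Lemma \ref{lm:Tn_bdd_L2W}, and you have simply spelled out that reduction explicitly by conjugating with $\mathcal{U}$ and using the Bargmann-transform definition of $H_{\hbar}^{r,\pm}$ to match $\calT_{\hbar}^{(k)}$ acting on the weighted $L^{2}$ spaces.
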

For convenience in the later argument, let us put the following definition:
\begin{defn}
\label{def:A_tilde}Let $\mathcal{A}$ be the group of affine transformation
on $\real^{2d}$ that preserves the symplectic form $\omega$, the
Euclidean norm and its derivative preserves the splitting $\real^{2d}=\real^{d}\oplus\real^{d}$
simultaneously. 
\end{defn}
The function $\mathcal{W}_{\hbar}^{r}$, (\ref{eq:def_W_W+}), is
invariant with respect to the transformation $\left(a,^{t}Da^{-1}\right)$
on $T^{*}\real^{2d}=\real^{2d}\times\real^{2d}$ when $a\in\mathcal{A}$.
This fact, together with Lemma \ref{lm:lift_of_LA}, yields
\begin{lem}
\label{lm:invariance_wrt_translation} If $a\in\mathcal{A}$, then
the prequantum operator $\prequantumL_{a}$, defined in (\ref{eq:F_f_affine})
extends to an isometry on $\mathcal{H}_{\hbar}^{r}(\real^{2d})$.
\end{lem}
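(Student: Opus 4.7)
The plan is to pass to the lifted picture via the modified Bargmann transform $\Bargmann_{x}$ of (\ref{eq:Bargmann_modified}), check that both the Schwartz kernel of $\BargmannP_{x}$ and the weight function $\mathcal{W}_{\hbar}^{r}$ are invariant under the phase-space dynamics of $a$, and then extract the isometry property from an algebraic factorisation of $\prequantumLlift_{a}$.

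First I would analyse the structure of $a\in\mathcal{A}$. Writing $a(x)=Bx+b$, the three constraints (symplectic, Euclidean-isometric, splitting-preserving) force $B=A\oplus A$ with $A\in O(d)$; in particular $|\det B|=1$ and $d(B):=\det\bigl((I+{}^{t}BB)/2\bigr)^{1/2}=1$. The associated canonical map on $T^{*}\real^{2d}$ is $F_{a}(x,\xi)=(Bx+b,\,{}^{t}B^{-1}\xi-\tfrac{1}{2}b^{\flat})$, which in the coordinates $(x,\zeta)$ of Proposition \ref{prop:Canonical_map F} simplifies to $(x,\zeta)\mapsto(a(x),\,B\zeta)$: indeed a short computation using $\eta(b)=\tfrac{1}{2}b^{\flat}$ (so the translation contributes nothing to $\zeta$) and ${}^{t}B^{-1}=B$ shows that the linear part acts on $\zeta=(\zeta_{p},\zeta_{q})$ by $(\zeta_{p},\zeta_{q})\mapsto(A\zeta_{p},A\zeta_{q})$. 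Since $A\in O(d)$, the quantities $|\zeta_{p}|$, $|\zeta_{q}|$ and the membership in the cones $\mathbf{C}_{\pm}(t)$ are preserved. Choosing the order functions $m$ and $m^{\pm}$ of (\ref{eq:def_order_function_m}) and (\ref{eq:def_Wr_+-}) to depend only on these $O(d)\times O(d)$-invariants (which is compatible with their defining conditions, since those only prescribe the value on the cones), the weight functions $\mathcal{W}_{\hbar}^{r}$ and $\mathcal{W}_{\hbar}^{r,\pm}$ become $F_{a}$-invariant.

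Next I pass to the lifted operator. From the derivation in the proof of Lemma \ref{prop:affine_Lf} one has
\[
\prequantumLlift_{a}=d(B)\cdot\BargmannP_{x}\circ\bigl(e^{\frac{i}{2\hbar}\varphi}L_{F_{a}}\bigr)\circ\BargmannP_{x}=\BargmannP_{x}\circ M_{a}\circ\BargmannP_{x},
\]
with $M_{a}:=e^{\frac{i}{2\hbar}\varphi}L_{F_{a}}$ and $\varphi$ a real affine function. Writing $F_{a}=F_{t_{b}}\circ F_{B}$ with $F_{B}(x,\xi)=(Bx,{}^{t}B^{-1}\xi)$ and $F_{t_{b}}(x,\xi)=(x+b,\xi-\tfrac{1}{2}b^{\flat})$, a direct check gives the factorisation $M_{a}=\mathcal{T}_{(b,-b^{\flat}/2)}\circ L_{F_{B}}$, so $M_{a}$ is the composition of a translation-with-phase (as in Corollary \ref{cor:4.11}) and the pull-back by the linear orthogonal symplectic map $F_{B}$. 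Both factors are $L^{2}(\real^{4d},(\mathcal{W}_{\hbar}^{r})^{2})$-isometries: multiplication by a unimodular function is trivially isometric, $|\det F_{B}|=1=|\det F_{t_{b}}|$, and the weight is $F_{a}$-invariant by the previous paragraph. Moreover, each factor commutes with $\BargmannP_{x}$: for $L_{F_{B}}$ this follows because the kernel of $\BargmannP_{x}$, which in the generalised setting of Subsection \ref{sub:Bargmann_transform_generalized} takes the form $\exp\bigl(\tfrac{i}{2\hbar}\Omega_{0}(z,z')-\tfrac{1}{4\hbar}|z-z'|_{g_{0}}^{2}\bigr)$, is invariant under $(z,z')\mapsto(F_{B}z,F_{B}z')$, the map $F_{B}$ preserving both $\Omega_{0}$ and the metric $g_{0}=\tfrac{1}{2}dx^{2}+2\,d\xi^{2}$ thanks to $B$ being simultaneously orthogonal and symplectic; for the translation-with-phase $\mathcal{T}_{(b,-b^{\flat}/2)}$ the argument of Corollary \ref{cor:4.11} extends verbatim to $\BargmannP_{x}$, the quasi-invariance identity $K(z-z_{0},z'-z_{0})=K(z,z')\cdot e^{\frac{i}{2\hbar}(\Omega_{0}(z,z_{0})+\Omega_{0}(z_{0},z'))}$ being independent of which translation-invariant metric is used in the decay factor. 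Consequently $M_{a}\BargmannP_{x}=\BargmannP_{x}M_{a}$.

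The two facts combine immediately. For $u\in\mathcal{H}_{\hbar}^{r}(\real^{2d})$, set $v:=\Bargmann_{x}u$, so $\BargmannP_{x}v=v$ and $\|v\|_{L^{2}((\mathcal{W}_{\hbar}^{r})^{2})}=\|u\|_{\mathcal{H}_{\hbar}^{r}}$. Commutation yields $M_{a}v\in\mathrm{Im}\,\BargmannP_{x}$, hence
\[
\|\prequantumL_{a}u\|_{\mathcal{H}_{\hbar}^{r}}=\|\prequantumLlift_{a}v\|_{L^{2}((\mathcal{W}_{\hbar}^{r})^{2})}=\|\BargmannP_{x}M_{a}v\|_{L^{2}((\mathcal{W}_{\hbar}^{r})^{2})}=\|M_{a}v\|_{L^{2}((\mathcal{W}_{\hbar}^{r})^{2})}=\|v\|_{L^{2}((\mathcal{W}_{\hbar}^{r})^{2})}=\|u\|_{\mathcal{H}_{\hbar}^{r}}.
\]
The main obstacle is conceptual rather than computational: one must ensure, once and for all, that the order functions $m,m^{\pm}$ fixed in the definition of the anisotropic Sobolev space are chosen to be invariant under the full $\mathcal{A}$-action on $\zeta$-space; granted this choice, every remaining step reduces to the invariance properties of $\Omega_{0}$, of $g_{0}$ and of the Bargmann kernel already established in Sections \ref{sec:Resonances-of-linear_exp_3}--\ref{sec:Resonance-of-hyperbolic_preq_4}.
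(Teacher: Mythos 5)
Your argument is correct and is the detailed version of the paper's one-line justification: the remark preceding the lemma simply observes the invariance of $\mathcal{W}_{\hbar}^{r}$ under $(a,{}^{t}Da^{-1})$ and invokes Lemma \ref{lm:lift_of_LA}, and you have worked out exactly what that hint unpacks to — the computations $d(B)=|\det B|=1$ for $B=A\oplus A$ with $A\in O(d)$, the $O(d)\times O(d)$ action on $\zeta$-space, and the commutation of the translation-with-phase and orthogonal-linear factors with $\BargmannP_{x}$ via Corollary \ref{cor:4.11} and the invariance of the Bargmann kernel under a map preserving both $\Omega_{0}$ and $g_{0}$. You also correctly flag a point the paper leaves implicit: the invariance of $\mathcal{W}_{\hbar}^{r}$ requires the order function $m$ of (\ref{eq:def_order_function_m}) to be chosen $O(d)\times O(d)$-invariant (for instance, depending only on $|\xi|/|x|$), which is compatible with, but not forced by, its stated defining conditions.
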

The norm $\|\cdot\|_{\mathcal{H}_{\hbar}^{r}}$ on the Hilbert space
$\mathcal{H}_{\hbar}^{r}(\real^{2d})$ is induced by a (unique) inner
product $(\cdot,\cdot)_{\mathcal{H}_{\hbar}^{r}(\real^{2d})}$. Notice
that even if two distributions $u$ and $v$ in $\mathcal{H}_{\hbar}^{r}(\real^{2d})$
have mutually disjoint supports, the inner product $(u,v)_{\mathcal{H}_{\hbar}^{r}(\real^{2d})}$
may not vanish. This is somewhat inconvenient. But we have the following
``\emph{pseudo-local}'' property. We omit the proof because it can
be given by a straightforward estimate.
\begin{lem}
\label{lm:pseudo_local_property} Let $\epsilon>0$. If $d(\supp u,\supp v)\ge\hbar^{(1-\epsilon)/2}$
for $u,v\in\mathcal{H}_{\hbar}^{r}(\real^{2d})$, we have 
\[
|(u,v)_{\mathcal{H}_{\hbar}^{r}(\real^{2d})}|\le C_{\nu,\epsilon}\cdot\hbar^{\nu}\cdot\|u\|_{\mathcal{H}_{\hbar}^{r}(\real^{2d})}\|v\|_{\mathcal{H}_{\hbar}^{r}(\real^{2d})}\quad\mbox{for \ensuremath{u,v\in\mathcal{H}_{\hbar}^{r}(\real^{2d})}}
\]
for arbitrarily large $\nu$, with $C_{\nu,\epsilon}>0$ a constant
depending on $\epsilon$ and $\nu$.
\end{lem}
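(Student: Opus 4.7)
The plan is to exhibit the Schwartz kernel of the sesquilinear form $(\cdot,\cdot)_{\mathcal{H}_\hbar^r}$ explicitly and extract from it a Gaussian factor in $|y-y'|$ that becomes super-polynomially small once $|y-y'|\ge\hbar^{(1-\epsilon)/2}$. Writing $(u,v)_{\mathcal{H}_\hbar^r(\real^{2d})}=(Su,v)_{L^2(\real^{2d})}$ with $S:=\mathcal{B}_x^*(\mathcal{W}_\hbar^r)^2\mathcal{B}_x$, the Schwartz kernel of $S$ is
\[
K_S(y,y')=\int_{\real^{4d}}(\mathcal{W}_\hbar^r(z))^2\,\phi_z(y)\,\overline{\phi_z(y')}\,dz.
\]
Using the Gaussian modulus of the wave packets together with the identity $|y-x|^2+|y'-x|^2=2|x-(y+y')/2|^2+\tfrac12|y-y'|^2$, the factor $\exp(-|y-y'|^2/(4\hbar))$ pulls out of the integral. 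The remaining $z$-integral contains a convergent Gaussian in $x$ and an oscillatory $\xi$-integral with polynomially growing weight $(\mathcal{W}_\hbar^r)^2$; integrating by parts $N$ times in $\xi$ against the oscillation $e^{i\xi\cdot(y-y')/\hbar}$ and invoking \eqref{eq:W^r_in_S_1-epsilon} to reduce the order of the weight, this integral is bounded by $C\hbar^{-N'}$ for some $N'=N'(r,d)$. Hence $|K_S(y,y')|\le C\hbar^{-N'}\exp(-|y-y'|^2/(4\hbar))$.

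To pass from this pointwise kernel estimate to the required anisotropic bilinear bound, introduce smooth cutoffs $\chi_A,\chi_B\in C^\infty(\real^{2d})$ equal to $1$ on $\supp u$ and $\supp v$ respectively, with $d(\supp\chi_A,\supp\chi_B)\ge\tfrac12\hbar^{(1-\epsilon)/2}$. Since $\chi_A u=u$ and $\chi_B v=v$, we have $(u,v)_{\mathcal{H}_\hbar^r}=((M_{\chi_B}SM_{\chi_A})u,v)_{L^2}$, where $M_\chi$ denotes multiplication by $\chi$, and the Schwartz kernel of $M_{\chi_B}SM_{\chi_A}$ vanishes outside $\{|y-y'|\ge\tfrac12\hbar^{(1-\epsilon)/2}\}$. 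Splitting $\exp(-|y-y'|^2/(4\hbar))=\exp(-|y-y'|^2/(8\hbar))\cdot\exp(-|y-y'|^2/(8\hbar))$, the first factor is bounded by $\exp(-c\hbar^{-\epsilon})=O(\hbar^\nu)$ uniformly on the kernel support and absorbs the polynomial prefactor $\hbar^{-N'}$, while the second is an integrable Gaussian providing the kernel needed for a Schur test. Performing this Schur-type argument at the Bargmann level on $L^2(\real^{4d},(\mathcal{W}_\hbar^r)^2)$, in the spirit of Lemma~\ref{lm:boundedness_of_molifier}, and using \eqref{eq:W_order_function_hbar} to compare $\mathcal{W}_\hbar^r$ at phase-space points separated by $\hbar^{1/2}$, yields the required bound.

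The main technical obstacle is this final conversion to anisotropic norms: the norms $\|\cdot\|_{L^2}$ and $\|\cdot\|_{\mathcal{H}_\hbar^r}$ are not globally comparable, because $\mathcal{W}_\hbar^r$ is neither bounded above nor below (it has polynomial growth and polynomial decay in opposite directions of $\zeta$-space), so an $L^2$-operator bound on $M_{\chi_B}SM_{\chi_A}$ cannot be transferred naively to the anisotropic setting. The resolution is to perform the Schur test directly in the weighted phase-space $L^2(\real^{4d},(\mathcal{W}_\hbar^r)^2)$, where the super-polynomial Gaussian smallness dominates the polynomial corrections that arise from comparing $\mathcal{W}_\hbar^r$ at different phase-space points via \eqref{eq:W_order_function_hbar}.
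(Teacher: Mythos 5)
Your outline correctly isolates the mechanism — the separation of supports by $\hbar^{(1-\epsilon)/2}\gg\hbar^{1/2}$ produces a factor $\exp(-c\hbar^{-\epsilon})$ from the Gaussian profile of the wave packets — and you correctly flag the real obstacle, namely that an $L^2(\real^{2d})$ operator bound on $T:=M_{\chi_B}SM_{\chi_A}$ cannot be transferred to the anisotropic bilinear bound because $\|\cdot\|_{L^2}$ and $\|\cdot\|_{\mathcal{H}_\hbar^r}$ are not comparable (elements of $\mathcal{H}_\hbar^r$ need not even be in $L^2$). However, the proposed resolution — ``perform the Schur test at the Bargmann level'' using the kernel estimate you derived — does not close the gap, because that kernel estimate lives in the wrong space. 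What you proved is a \emph{physical-space} modulus bound $|K_S(y,y')|\lesssim\hbar^{-N'}e^{-|y-y'|^2/(4\hbar)}$. To run a Schur-type argument on $L^2(\real^{4d},(\mathcal{W}_\hbar^r)^2)$ in the spirit of Lemma~\ref{lm:boundedness_of_molifier2}, you need a bound on the \emph{lifted} kernel $K_{T^{\mathrm{lift}}}(z,z')$ of the form $\lesssim\hbar^\nu\langle\hbar^{-1/2}|z-z'|\rangle^{-M}$ with $M>2r+4d$ — in particular, decay in the $\xi$-variable. If you try to obtain this from your modulus bound on $K_S$ by inserting it into $K_{T^{\mathrm{lift}}}(z,z')=\iint\overline{\phi_z(y)}K_T(y,y')\phi_{z'}(y')\,dy\,dy'$, the $y,y'$ integrals produce a Gaussian in $x-x'$ but \emph{no} decay whatsoever in $\xi-\xi'$; the corresponding Schur integral $\int dz'$ then diverges, and the weight comparison \eqref{eq:W_order_function_hbar} cannot help because it only controls the weight ratio up to a polynomial factor that the non-decaying kernel cannot absorb.

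The way to repair the argument is to avoid passing through the physical-space kernel of $S$ and instead work directly with the lifted multiplication operators. Write $\mathcal{B}_xu=M^{\mathrm{lift}}(\chi_u)\mathcal{B}_xu$ and $\mathcal{B}_xv=M^{\mathrm{lift}}(\chi_v)\mathcal{B}_xv$ (valid since $\mathcal{B}_x^*\mathcal{B}_x=\mathrm{Id}$ and $\chi_u u=u$, $\chi_v v=v$), so that $(u,v)_{\mathcal{H}_\hbar^r}=(\mathcal{W}_\hbar^{-1}M^{\mathrm{lift}}(\chi_v)\mathcal{W}_\hbar^2M^{\mathrm{lift}}(\chi_u)\mathcal{W}_\hbar^{-1}\,a,b)_{L^2}$ with $a=\mathcal{W}_\hbar\mathcal{B}_xu$, $b=\mathcal{W}_\hbar\mathcal{B}_xv$. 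The kernel of $M^{\mathrm{lift}}(\chi_u)=\mathcal{B}_x\chi_u\mathcal{B}_x^*$ is $\int\overline{\phi_z(y)}\chi_u(y)\phi_{z'}(y)\,dy$, and integration by parts in $y$ against the phase $e^{i(\xi'-\xi)\cdot y/\hbar}$ (exactly as in the proof of Lemma~\ref{lm:lift_of_multiplication_operator}) gives the missing $\langle\hbar^{-1/2}|\xi-\xi'|\rangle^{-\nu}$ decay; the Gaussian in $y$ supplies the $\langle\hbar^{-1/2}|x-x'|\rangle^{-\nu}$ decay \emph{and} the super-small factor $\exp(-c\hbar^{-\epsilon})$ once $d(\pi(z),\supp\chi_u)\gtrsim\hbar^{(1-\epsilon)/2}$ (recall $\chi_u\in\scrX_\hbar$ at scale $\hbar^{1/2-\theta}\ge\hbar^{(1-\epsilon)/2}$). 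Inserting a phase-space cutoff $\mathbf{1}_{K_u}$ onto a $\frac13\hbar^{(1-\epsilon)/2}$-neighborhood of $\pi^{-1}(\supp\chi_u)$, the operator $\mathcal{W}_\hbar\mathbf{1}_{K_u^c}M^{\mathrm{lift}}(\chi_u)\mathcal{W}_\hbar^{-1}$ has norm $O(\hbar^\nu)$ by Lemma~\ref{lm:boundedness_of_molifier2}, and the corresponding $\mathbf{1}_{K_u}$, $\mathbf{1}_{K_v}$ pieces satisfy $\mathbf{1}_{K_u}\mathbf{1}_{K_v}=0$ since $K_u\cap K_v=\emptyset$. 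This is the clean route; without the $\xi$-decay of the lifted kernel, your final Schur step does not go through.
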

From the definition of the function $\mathcal{W}_{\hbar}^{r}(\cdot)$
and (\ref{eq:W_order_function}), we have 
\begin{equation}
\mathcal{W}_{\hbar}^{r}(x,\xi)\le C\cdot\mathcal{W}_{\hbar}^{r}(y,\eta)\cdot\langle\hbar^{-1/2}|(x,\xi)-(y,\eta)|\rangle^{2r}.\label{eq:estmate_on_Wr}
\end{equation}
The next lemma and corollary are direct consequences of this estimate.
The proof is completely parallel to that of Lemma \ref{lm:boundedness_of_molifier}. 
\begin{lem}
\label{lm:boundedness_of_molifier2} If $R_{\hbar}:\mathcal{S}(\real^{2d}\oplus\real^{2d})\to\mathcal{S}(\real^{2d}\oplus\real^{2d})$
is an integral operator of the form 
\[
\left(R_{\hbar}u\right)(x,\xi)=\int K_{\hbar}(x,\xi;x',\xi')u(x',\xi')dx'd\xi'
\]
 depending on $\hbar$ and if the kernel $K_{\hbar}(\cdot;\cdot)$
is a continuous function satisfying 
\[
|K_{\hbar}(x,\xi;x',\xi')|\le\langle\hbar^{-1/2}\cdot|(x,\xi)-(x',\xi')|\rangle^{-\nu}
\]
 for some $\nu>2r+4d$, then the operator $R_{\hbar}$ extends uniquely
to a bounded operator on $L^{2}(\real^{2d}\oplus\real^{2d},(\mathcal{W}_{\hbar}^{r})^{2})$
and 
\[
\|R_{\hbar}\|_{L^{2}(\real^{2d}\oplus\real^{2d},(\mathcal{W}_{\hbar}^{r})^{2})}\le C_{\nu}
\]
 where $C_{\nu}$ is a constant independent of $\hbar$. The same
holds true with $\mathcal{W}_{\hbar}^{r}$ replaced by $\mathcal{W}_{\hbar}^{r,\pm}$
simultaneously. \end{lem}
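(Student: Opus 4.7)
The plan is to transcribe, almost verbatim, the proof of Lemma \ref{lm:boundedness_of_molifier}, replacing the isotropic weight $W_{\hbar}^{r}$ on $\real^{2D}$ by the anisotropic weight $\mathcal{W}_{\hbar}^{r}$ on $\real^{2d}\oplus\real^{2d}\simeq\real^{4d}$. The only nontrivial input needed is the submultiplicative estimate (\ref{eq:estmate_on_Wr}), which plays the role of (\ref{eq:W_order_function_hbar}) in the isotropic case.

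First I would observe that multiplication by $\mathcal{W}_{\hbar}^{r}$ is an isometric isomorphism $L^{2}(\real^{4d},(\mathcal{W}_{\hbar}^{r})^{2})\to L^{2}(\real^{4d})$, so bounding $R_{\hbar}$ on the weighted space amounts to bounding the conjugated operator $\tilde R_{\hbar}:=\mathcal{M}(\mathcal{W}_{\hbar}^{r})\circ R_{\hbar}\circ \mathcal{M}(\mathcal{W}_{\hbar}^{r})^{-1}$ on unweighted $L^{2}$. Its Schwartz kernel is
\[
\tilde K_{\hbar}(x,\xi;x',\xi')=\frac{\mathcal{W}_{\hbar}^{r}(x,\xi)}{\mathcal{W}_{\hbar}^{r}(x',\xi')}\,K_{\hbar}(x,\xi;x',\xi'),
\]
and combining (\ref{eq:estmate_on_Wr}) with the hypothesis on $K_{\hbar}$ yields
\[
|\tilde K_{\hbar}(x,\xi;x',\xi')|\le C\,\langle\hbar^{-1/2}|(x,\xi)-(x',\xi')|\rangle^{2r-\nu}.
\]

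Next I would invoke the Schur inequality (\ref{eq:Schur_inequality}). Since $\nu-2r>4d$ by hypothesis, the function $w\mapsto\langle|w|\rangle^{2r-\nu}$ is integrable on $\real^{4d}$, and after the substitution $w=\hbar^{-1/2}z$ the row and column sums become
\[
\sup_{(x,\xi)}\int\langle\hbar^{-1/2}|(x,\xi)-(x',\xi')|\rangle^{2r-\nu}\,dx'\,d\xi'=\hbar^{2d}\int_{\real^{4d}}\langle|w|\rangle^{2r-\nu}\,dw,
\]
which is uniformly bounded for $\hbar\in(0,1]$. This gives $\|R_{\hbar}\|_{L^{2}(\real^{4d},(\mathcal{W}_{\hbar}^{r})^{2})}=\|\tilde R_{\hbar}\|_{L^{2}(\real^{4d})}\le C_{\nu}$, completing the case of $\mathcal{W}_{\hbar}^{r}$.

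Finally, for the variant with $\mathcal{W}_{\hbar}^{r,\pm}$, I would simply observe that the order functions $m^{\pm}$ take values in the same interval $[-r,r]$ as $m$, so $W^{r,\pm}$ satisfies (\ref{eq:W_order_function}) with the same exponent $2r$; scaling by $\hbar^{-1/2}$ as in the passage from (\ref{eq:W_order_function}) to (\ref{eq:estmate_on_Wr}) yields the analogous estimate for $\mathcal{W}_{\hbar}^{r,\pm}$, and the argument above then goes through unchanged. There is no genuine obstacle here: once one recognizes that (\ref{eq:estmate_on_Wr}) absorbs all the geometric content of the weights, the conclusion reduces to a mechanical application of Schur's lemma exactly as in the proof of Lemma \ref{lm:boundedness_of_molifier}.
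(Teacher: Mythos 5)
Your proposal is correct and takes essentially the same route as the paper: the paper disposes of this lemma in one line, saying it is a direct consequence of the submultiplicative estimate~(\ref{eq:estmate_on_Wr}) and that the proof is ``completely parallel to that of Lemma~\ref{lm:boundedness_of_molifier}'', which is precisely what you carry out (conjugate by the weight, use~(\ref{eq:estmate_on_Wr}) to get a kernel bound $\langle\hbar^{-1/2}|\cdot|\rangle^{2r-\nu}$, then apply Schur, with the same observation that~(\ref{eq:W_order_function}) holds equally for the order functions $m^{\pm}$).
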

\begin{cor}
\label{cor:BargmannP_bdd} The Bargmann projector $\BargmannP_{\hbar}$
extends uniquely to a bounded operator on $L^{2}(\real^{2d}\oplus\real^{2d},(\mathcal{W}_{\hbar}^{r})^{2})$
and its operator norm is bounded by a constant that does not depend
on $\hbar>0$. The same holds true with $\mathcal{W}_{\hbar}^{r}$
replaced by $\mathcal{W}_{\hbar}^{r,\pm}$ simultaneously.\end{cor}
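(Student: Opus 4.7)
The plan is to reduce the corollary to Lemma \ref{lm:boundedness_of_molifier2} by checking that the Schwartz kernel of the Bargmann projector satisfies its hypothesis, uniformly in $\hbar$. The point is that Lemma \ref{lm:boundedness_of_molifier2} only uses the order-function property (\ref{eq:estmate_on_Wr}) of the weight, together with a Schur inequality bound on the kernel; it makes no essential use of the particular form of $\mathcal{W}_{\hbar}^{r}$ versus $\mathcal{W}_{\hbar}^{r,\pm}$. Hence, once the kernel estimate is obtained, it immediately delivers both claims of the corollary.

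First, I would recall from Proposition \ref{prop:The-Bargmann-projector} that $\BargmannP_{\hbar}$ is an integral operator on $L^{2}(\real^{2d}\oplus\real^{2d})$ with Schwartz kernel
\[
K_{\mathcal{P},\hbar}(z,z')=\exp\!\left(\frac{i}{2\hbar}\omega(z,z')-\frac{1}{4\hbar}|z-z'|^{2}\right),
\]
so in absolute value
\[
|K_{\mathcal{P},\hbar}(z,z')|=\exp\!\left(-\tfrac{1}{4}\bigl|\hbar^{-1/2}(z-z')\bigr|^{2}\right).
\]
Since the Gaussian $t\mapsto e^{-t^{2}/4}$ decays faster than any polynomial, for every $\nu>0$ there exists a constant $C_{\nu}>0$, \emph{independent of $\hbar$}, such that
\[
|K_{\mathcal{P},\hbar}(z,z')|\le C_{\nu}\,\langle \hbar^{-1/2}|z-z'|\rangle^{-\nu}\qquad\text{for all }z,z'\in\real^{2d}\oplus\real^{2d}.
\]

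Next, I would fix some $\nu>2r+4d$ and apply Lemma \ref{lm:boundedness_of_molifier2} to $R_{\hbar}=\BargmannP_{\hbar}$. The lemma yields a bound on the operator norm of $\BargmannP_{\hbar}$ acting on $L^{2}(\real^{2d}\oplus\real^{2d},(\mathcal{W}_{\hbar}^{r})^{2})$, by a constant depending only on $\nu$, $r$, $d$. Since the lemma also applies verbatim with $\mathcal{W}_{\hbar}^{r}$ replaced by $\mathcal{W}_{\hbar}^{r,\pm}$ (the only properties of the weight used in its proof being the order-function estimate (\ref{eq:estmate_on_Wr}) and its analogue for $\mathcal{W}_{\hbar}^{r,\pm}$, which hold by the same reasoning as (\ref{eq:W_order_function_hbar})), the same constant works for all three weights. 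The uniqueness of the bounded extension is automatic because $\mathcal{S}(\real^{2d}\oplus\real^{2d})$ is dense in each weighted $L^{2}$ space.

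There is no real obstacle here: the decisive ingredient is the Gaussian decay of $K_{\mathcal{P},\hbar}$ on the scale $\hbar^{1/2}$, which is precisely the scale built into the weight functions $\mathcal{W}_{\hbar}^{r}$ and $\mathcal{W}_{\hbar}^{r,\pm}$ via the scaling (\ref{eq:def_Wh}); this matching of scales is what allows the Schur-type estimate inside Lemma \ref{lm:boundedness_of_molifier2} to produce an $\hbar$-independent bound. The only mild point to take care of is to check that the constant $C_{\nu}$ in the pointwise bound on $|K_{\mathcal{P},\hbar}|$ is indeed independent of $\hbar$, which is transparent from the rewriting $|K_{\mathcal{P},\hbar}(z,z')|=e^{-|\hbar^{-1/2}(z-z')|^{2}/4}$.
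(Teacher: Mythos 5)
Your proposal is correct and follows the same route the paper takes: check from (\ref{eq:Bargman_Kernel}) that the kernel of $\BargmannP_{\hbar}$ has Gaussian decay at the scale $\hbar^{1/2}$, and invoke Lemma \ref{lm:boundedness_of_molifier2} (which the paper states expressly to cover both $\mathcal{W}_{\hbar}^{r}$ and $\mathcal{W}_{\hbar}^{r,\pm}$). The only point worth flagging — and the paper's own terse proof glosses over it the same way — is that the integral representation of $\BargmannP_{\hbar}$ in Proposition \ref{prop:The-Bargmann-projector} uses the normalized measure $dz'/(2\pi\hbar)^{2d}$, whereas Lemma \ref{lm:boundedness_of_molifier2} is phrased with Lebesgue measure $dx'd\xi'$, so strictly one is matching up the $(2\pi\hbar)^{-2d}$ prefactor of the Lebesgue kernel against the $\hbar^{2d}$ produced by the Schur integral; this cancels exactly and the conclusion holds, but it deserves a word.
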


\section{\label{sec:Nonlinear-prequantum-maps}Nonlinear prequantum maps on
$\real^{2d}$}

In this section, we prepare some basic estimates on the effect of
non-linearity of the Anosov diffeomorphism $f$ on the anisotropic
Sobolev space $\mathcal{H}_{\hbar}^{r}(\real^{2d})$. Most of the
results in this section may be rather obvious at least for those readers
who are familiar with Fourier analysis. However we have to be attentive
to the following particular situations in our argument:
\begin{itemize}
\item The escape function $\mathcal{W}_{\hbar}^{r}$ in the definition of
the anisotropic Sobolev space $\mathcal{H}_{\hbar}^{r}(\real^{2d})$
has variable growth order $m(\cdot)$ depending on the directions.
This leads to the fact that the (prequantum) transfer operator associated
to a non-linear map may be unbounded even if the map is very close
to the identity. 
\item The spectral projection operators $t_{\hbar}^{(k)}$ in Proposition
\ref{prop:prequatum_op_for_hyp_linear} is also very anisotropic and
rather singular. It is not well-defined (bounded) on any usual (isotropic)
Sobolev spaces of positive or negative order. 
\item The escape function $\mathcal{W}_{\hbar}^{r}$ is not very smooth,
viewed at the scale ($\sim\hbar^{1/2})$ of the smallest-possible
wave packets in the phase space. In terms of the theory of pseudodifferential
operators, this implies that the escape function $\mathcal{W}_{\hbar}^{r}$
belongs only to the symbol class of ``critical order''. We have
to avoid carefully the difficulties caused by this fact. 
\end{itemize}
For these reasons, we are going to give the argument to some detail.
The main result in this section is Proposition \ref{lm:L_g_Y_almost_identity},
which concerns the third item above.

Recall that the stable and unstable subspaces $E_{s}(x)$ and $E_{u}(x)$
for the Anosov diffeomorphism $f$ depend on the point $x\in M$ not
smoothly but only Hölder continuously. We let $0<\beta<1$ be the
Hölder exponent. (See Remark \ref{Remark:Holder_exp_beta}(1) page
\pageref{Remark:Holder_exp_beta}) In what follows, we fix a small
positive constant $\theta$ such that 
\begin{equation}
0<\theta<\beta/8.\label{eq:cond_theta}
\end{equation}
The open ball of radius $c>0$ on $\real^{2d}$ is denoted by
\[
\mathbb{D}\left(c\right)=\left\{ x\in\mathbb{R}^{2d}\mid|x|<c\right\} .
\]

\subsection{Truncation operations in the real space}

\label{ss:trumcation}

We first consider the operation of truncating functions in the (real)
space $\mathbb{R}_{x}^{2d}$ by multiplying smooth functions with
small supports. Below we consider the following setting:

\bigskip{}

\noindent%
\framebox{\begin{minipage}[t]{1\columnwidth}%
\textbf{Setting I:} \textit{~\label{Setting-I}For each $\hbar>0$,
there is a given set $\scrX_{\hbar}$ of $C^{\infty}$ functions on
$\real^{2d}$ such that, for all $\psi\in\mathscr{X}_{\hbar}$ and
$\hbar>0$, }
\begin{description}
\item [{(C1)}] \textit{the support of $\psi$ is contained in the disk
$\mathbb{D}\left(C_{*}\hbar^{1/2-\theta}\right)$ and}
\item [{(C2)}] \textit{$\left|\partial_{x}^{\alpha}\psi(x)\right|<C_{\alpha}\hbar^{-\left(\frac{1}{2}-\theta\right)\left|\alpha\right|}$
for each multi-index $\alpha\in\mathbb{N}^{2d}$,}
\end{description}
\textit{where $C_{*}>0$ and $C_{\alpha}>0$ are constants independent
of $\psi\in\mathscr{X}_{\hbar}$ and $\hbar>0$.}%
\end{minipage}}

\bigskip{}

In the next section, we will consider a few specific sets of functions
as $\scrX_{\hbar}$ and apply the argument in this section to them.
\begin{rem}
The condition above on $\mathscr{X}_{\hbar}$ is equivalent to the
condition that the normalized family

\begin{equation}
\widetilde{\scrX}_{\hbar}=\left\{ \left.\varphi\left(x\right)=\psi\left(\hbar^{1/2-\theta}x\right)\in C^{\infty}\left(\mathbb{R}^{2d}\right)\;\right|\;\psi\in\scrX_{\hbar}\right\} \quad\mbox{for }\hbar>0\label{eq:def_Set_X_h-1}
\end{equation}
are uniformly bounded in the (uniform) $C^{\infty}$ topology and
supported in a fixed bounded subset of $\real^{2d}$. 
\end{rem}
Recall the transformations 
\[
\Bargmann_{x}:L^{2}\left(\real_{x}^{2d}\right)\to L^{2}\left(\real_{x}^{2d}\oplus\real_{\xi}^{2d}\right),\qquad\Bargmann_{x}^{*}:L^{2}\left(\real_{x}^{2d}\oplus\real_{\xi}^{2d}\right)\to L^{2}\left(\real_{x}^{2d}\right)
\]
and 
\[
\BargmannP_{x}:=\Bargmann_{x}^{*}\circ\Bargmann_{x}:L^{2}\left(\real_{x}^{2d}\oplus\real_{\xi}^{2d}\right)\to L^{2}\left(\real_{x}^{2d}\oplus\real_{\xi}^{2d}\right),
\]
which are defined in (\ref{eq:Bargmann_modified}) and (\ref{eq:BargmannP_modified})
as slight modifications of the Bargmann transform $\Bargmann_{\hbar}$,
its adjoint $\Bargmann_{\hbar}^{*}$ and the Bargmann projector $\BargmannP_{\hbar}$
in the case $D=2d$. Notice that the operators $\Bargmann_{x}$, $\Bargmann_{x}^{*}$
and $\BargmannP_{x}$ depend on the parameter $\hbar$ (and hence
on $N$). 

Below we write $\multiplication(\varphi)$ for the multiplication
operator by a function $\varphi$. Since $\hbar^{1/2-\theta}\gg\hbar^{1/2}$
for small $\hbar$, the functions in $\scrX_{\hbar}$ are very smooth
(or flat) viewed in the scale of the wave packet $\phi_{x,\xi}(\cdot)$
used in the Bargmann transform $\Bargmann_{\hbar}$. This observation
naturally leads to the following few statements.

For each $\psi\in\mathscr{X}_{\hbar}$, let $\multiplication^{\mathrm{lift}}(\psi)=\Bargmann_{x}\circ\multiplication(\psi)\circ\Bargmann_{x}^{*}$
be the lift of the multiplication operator $\multiplication(\psi)$
with respect to the (modified) Bargmann transform $\Bargmann_{x}$.
Then it is approximated by the multiplication by the function $\psi\circ\pi$
with $\pi\left(x,\xi\right):=x$.
\begin{lem}
\label{lm:lift_of_multiplication_operator}There exists a constant
$C>0$ such that, for any $\hbar>0$ and $\psi\in\scrX_{\hbar}$,
we have
\begin{equation}
\left\Vert \multiplication^{\mathrm{lift}}(\psi)-\multiplication(\psi\circ\pi)\circ\BargmannP_{x}\right\Vert _{L^{2}(\real^{2d}\oplus\real^{2d},(\mathcal{W}_{\hbar}^{r})^{2})}<C\hbar^{\theta}\label{eq:approximation_of_lift_of_multiplication}
\end{equation}
and 
\[
\left\Vert \multiplication^{\mathrm{lift}}(\psi)-\BargmannP_{x}\circ\multiplication(\psi\circ\pi)\right\Vert _{L^{2}(\real^{2d}\oplus\real^{2d},(\mathcal{W}_{\hbar}^{r})^{2})}<C\hbar^{\theta}.
\]
Consequently we have 
\begin{equation}
\left\Vert \left[\BargmannP_{x},\multiplication(\psi\circ\pi)\right]\right\Vert _{L^{2}(\real^{2d}\oplus\real^{2d},(\mathcal{W}_{\hbar}^{r})^{2})}<C\hbar^{\theta}\label{eq:commutator_BargmannP_multiplication_op}
\end{equation}
where $[A,B]$ denotes the commutator of two operators: $[A,B]=A\circ B-B\circ A$.
The same statement holds true with $\mathcal{W}_{\hbar}^{r}$ replaced
by $\mathcal{W}_{\hbar}^{r,\pm}$.\end{lem}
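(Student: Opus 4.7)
The plan is to prove each of the three estimates by controlling the Schwartz kernel of the relevant operator on the phase space $\mathbb{R}_{x}^{2d}\oplus\mathbb{R}_{\xi}^{2d}$ and then applying Lemma \ref{lm:boundedness_of_molifier2}. First, I would write out the Schwartz kernel of $\multiplication^{\mathrm{lift}}(\psi)=\Bargmann_{x}\circ\multiplication(\psi)\circ\Bargmann_{x}^{*}$ explicitly as
\[
K^{\mathrm{lift}}_{\psi}(x,\xi;x',\xi')=\int \phi^{(x)}_{x,\xi}(y)\,\overline{\phi^{(x)}_{x',\xi'}(y)}\,\psi(y)\,dy,
\]
where $\phi^{(x)}_{x,\xi}$ is the (modified) wave packet used in $\Bargmann_{x}$. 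Since the kernel of $\BargmannP_{x}$ is $K_{\BargmannP_{x}}(x,\xi;x',\xi')=\int\phi^{(x)}_{x,\xi}(y)\,\overline{\phi^{(x)}_{x',\xi'}(y)}\,dy$, the kernel of $\multiplication(\psi\circ\pi)\circ\BargmannP_{x}$ is $\psi(x)\cdot K_{\BargmannP_{x}}(x,\xi;x',\xi')$, while that of $\BargmannP_{x}\circ\multiplication(\psi\circ\pi)$ is $\psi(x')\cdot K_{\BargmannP_{x}}(x,\xi;x',\xi')$. Thus the kernel of the difference operator in (\ref{eq:approximation_of_lift_of_multiplication}) is
\[
K_{1}(x,\xi;x',\xi')=\int \bigl(\psi(y)-\psi(x)\bigr)\,\phi^{(x)}_{x,\xi}(y)\,\overline{\phi^{(x)}_{x',\xi'}(y)}\,dy,
\]
and similarly for the second estimate with $\psi(y)-\psi(x')$ in place of $\psi(y)-\psi(x)$.

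Next, I would use the first-order Taylor expansion $\psi(y)-\psi(x)=(y-x)\cdot\int_{0}^{1}\nabla\psi(x+t(y-x))\,dt$ and the assumption (C2) in Setting I to bound $|\nabla\psi|\le C\hbar^{-(1/2-\theta)}$ uniformly. The product $\phi^{(x)}_{x,\xi}(y)\overline{\phi^{(x)}_{x',\xi'}(y)}$ is, after completing the square in $y$, a Gaussian centered at $(x+x')/2$ with width $\sim\hbar^{1/2}$, modulated by a phase; so the extra factor $(y-x)$ will supply a gain of $\hbar^{1/2}$ in the Gaussian integration. Combining the two, one gets a pointwise bound
\[
|K_{1}(x,\xi;x',\xi')|\le C\,\hbar^{-(1/2-\theta)}\cdot\hbar^{1/2}\cdot G(x,\xi;x',\xi')=C\hbar^{\theta}\,G(x,\xi;x',\xi'),
\]
where $G$ is a kernel of the same Gaussian-decay type as $K_{\BargmannP_{x}}$, i.e.\ $|G|\le C_{\nu}\langle\hbar^{-1/2}|(x,\xi)-(x',\xi')|\rangle^{-\nu}$ for arbitrarily large $\nu$. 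The latter decay estimate in $(x,\xi)-(x',\xi')$ requires a (repeated) integration by parts in the oscillatory integral using the phase $\tfrac{i}{\hbar}((\xi-\xi')\cdot y+\cdots)$ and the support/derivative control (C1)-(C2) on $\psi$; this is the one step where one has to be genuinely careful, because each $y$-derivative that falls on the Taylor remainder loses a factor $\hbar^{-(1/2-\theta)}$, and one must check that enough factors of $\hbar^{1/2}$ are available from the Gaussian to absorb this without destroying the $\hbar^{\theta}$ gain. This is where the choice (\ref{eq:cond_theta}) with $0<\theta<\beta/8<1/2$ is used.

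Once this kernel bound is in place, applying Lemma \ref{lm:boundedness_of_molifier2} with the weight $\mathcal{W}_{\hbar}^{r}$ (and likewise $\mathcal{W}_{\hbar}^{r,\pm}$, whose order function satisfies the same regularity estimates (\ref{eq:W^r_in_S_1-epsilon}) and translation estimate (\ref{eq:estmate_on_Wr})) yields the operator norm bound (\ref{eq:approximation_of_lift_of_multiplication}) on $L^{2}(\real^{2d}\oplus\real^{2d},(\mathcal{W}_{\hbar}^{r})^{2})$. The second estimate is obtained identically by using the Taylor expansion around $x'$ instead of $x$ (or, equivalently, by taking the adjoint of the first estimate applied to $\overline{\psi}$, after noting that $\Bargmann_{x}^{*}\circ\multiplication^{\mathrm{lift}}(\psi)=\multiplication(\psi)\circ\Bargmann_{x}^{*}$ implies the adjoint identity $\multiplication^{\mathrm{lift}}(\psi)^{*}=\multiplication^{\mathrm{lift}}(\overline{\psi})$). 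The commutator bound (\ref{eq:commutator_BargmannP_multiplication_op}) then follows by composing $\BargmannP_{x}=\BargmannP_{x}^{2}$ on each side and subtracting the two previous estimates, since $\BargmannP_{x}\circ\multiplication^{\mathrm{lift}}(\psi)=\multiplication^{\mathrm{lift}}(\psi)\circ\BargmannP_{x}=\multiplication^{\mathrm{lift}}(\psi)$ by definition. The principal technical obstacle, as emphasized above, is establishing the rapid off-diagonal decay in the kernel estimate uniformly in $\hbar$, which requires a quantitative (non)stationary-phase argument compatible with the modest flatness scale $\hbar^{1/2-\theta}$ of the symbols in $\scrX_{\hbar}$.
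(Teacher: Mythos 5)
Your proposal is correct and follows essentially the same route as the paper: write out the Schwartz kernel of $\multiplication^{\mathrm{lift}}(\psi)-\multiplication(\psi\circ\pi)\circ\BargmannP_{x}$, observe that it contains the factor $\psi(y)-\psi(x)$, establish the off-diagonal decay by repeated integration by parts against the oscillatory exponential, and conclude by Lemma \ref{lm:boundedness_of_molifier2}. The paper records the needed gain compactly via the single inequality $|\psi(x)-\psi(y)|\cdot\langle\hbar^{-1/2}|x-y|\rangle^{-1}<C\hbar^{\theta}$ (Eq.\ (\ref{eq:smoothness_of_psi_pi})), which simultaneously handles the case where the difference $\psi(y)-\psi(x)$ is not differentiated (bounded by $\hbar^{\theta}$ on the Gaussian scale $|y-x|\lesssim\hbar^{1/2}$, absorbing the extra $\langle\hbar^{-1/2}|x-y|\rangle$ factor into the rapid decay) and the case where a $y$-derivative falls on it (loss $\hbar^{-(1/2-\theta)}$ balanced by the $\hbar^{1/2}$-scale gain from each application of $L=(1-i(\xi-\xi')\partial_y)/(1+\hbar^{-1}(\xi-\xi')^2)$); your worry about this balance is well-placed but does resolve favorably, and the only arithmetic that matters is $0<\theta<1/2$, which follows from (\ref{eq:cond_theta}). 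The deduction of the commutator estimate is simply the difference of the two operator-norm estimates, so the remark about $\BargmannP_{x}=\BargmannP_{x}^{2}$ is unnecessary but harmless. This is the argument the paper gives.
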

\begin{proof}
The kernel of the operator 
\[
\multiplication^{\mathrm{lift}}(\psi)-\multiplication(\psi\circ\pi)\circ\BargmannP_{x}=\Bargmann_{x}\circ\multiplication(\psi)\circ\Bargmann_{x}^{*}-\multiplication(\psi\circ\pi)\circ\Bargmann_{x}\circ\Bargmann_{x}^{*}
\]
is written 
\[
K(x,\xi;x',\xi')=(2\pi\hbar)^{-d}\int e^{(i/\hbar)\left(\xi(y-x)-\xi'(x'-y)\right)}\cdot e^{-|y-x|^{2}/4\hbar-|y-x'|^{2}/4\hbar}(\psi(y)-\psi(x))dy.
\]
We apply integration by parts, using the differential operator 
\[
L=\frac{1-i(\xi-\xi')\partial_{y}}{1+\hbar^{-1}(\xi-\xi')^{2}},
\]
which satisfies $L^{\nu}\left(e^{(i/\hbar)\left(\xi(y-x)-\xi'(x'-y)\right)}\right)=e^{(i/\hbar)\left(\xi(y-x)-\xi'(x'-y)\right)}$
for $\nu$ times. Then we get 
\begin{align*}
K(x,\xi;x',\xi') & =(2\pi\hbar)^{d}\int e^{(i/\hbar)\left(\xi(y-x)-\xi'(x'-y)\right)}\cdot(^{t}L)^{\nu}\left(e^{-|y-x|^{2}/4\hbar-|y-x'|^{2}/4\hbar}(\psi(y)-\psi(x))\right)dy
\end{align*}
where $^{t}L=(1-i(\xi-\xi')\partial_{y})/(1+\hbar^{-1}(\xi-\xi')^{2})$
is the transpose of $L$. Using the conditions (C1) and (C2) on the
family $\mathscr{X}_{\hbar}$ and, in particular, the estimate 
\begin{equation}
|\psi(x)-\psi(y)|\cdot\langle\hbar^{-1/2}|x-y|\rangle^{-1}<C\hbar^{\theta}\label{eq:smoothness_of_psi_pi}
\end{equation}
that follows from the condition (C2), we see that the integrand is
bounded in absolute value by 
\[
C\hbar^{\theta}\cdot\langle\hbar^{-1/2}|\xi-\xi'|\rangle^{-\nu}\cdot\langle\hbar^{-1/2}|x-y|\rangle^{-\nu}\cdot\langle\hbar^{-1/2}|x'-y|\rangle^{-\nu}.
\]
Hence, letting $\nu$ large, we obtain 
\[
|K(x,\xi;x',\xi')|\le C\hbar^{\theta}\cdot\langle\hbar^{-1/2}|\xi-\xi'|\rangle^{-\nu}\cdot\langle\hbar^{-1/2}|x-x'|\rangle^{-\nu}.
\]
This estimate for sufficiently large $\nu$ and Lemma \ref{lm:boundedness_of_molifier2}
give the first inequality (\ref{eq:approximation_of_lift_of_multiplication}).
We can get the second inequality in the same manner. \end{proof}
\begin{cor}
\label{lm:XM_exchange} The multiplication operator $\multiplication(\psi)$
by $\psi\in\scrX_{\hbar}$ extends to a bounded operator on $\mathcal{H}_{\hbar}^{r}(\real^{2d})$
and, for the operator norm, we have $\|\multiplication(\psi)\|_{\mathcal{H}_{\hbar}^{r}(\real^{2d})}<\|\psi\|_{\infty}+C\hbar^{\theta}$
for all $\psi\in\mathscr{X}_{\hbar}$, with a constant $C>0$ independent
of $\hbar$ and $\psi$. \end{cor}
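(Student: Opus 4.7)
The plan is to derive the corollary directly from the preceding Lemma \ref{lm:lift_of_multiplication_operator} by unwinding the definition of the anisotropic Sobolev norm and exploiting the fact that $\Bargmann_{x}$ is an isometric embedding whose image is fixed pointwise by $\BargmannP_{x}$.

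First I would rewrite the target norm using the isometric embedding $\Bargmann_{x}:\mathcal{H}_{\hbar}^{r}(\real^{2d})\to L^{2}(\real^{2d}\oplus\real^{2d},(\mathcal{W}_{\hbar}^{r})^{2})$ and the intertwining relation
\[
\Bargmann_{x}\circ\multiplication(\psi)=\multiplication^{\mathrm{lift}}(\psi)\circ\Bargmann_{x},
\]
which is just the definition of the lift. Thus for $u\in\mathcal{S}(\real^{2d})$,
\[
\|\multiplication(\psi)u\|_{\mathcal{H}_{\hbar}^{r}}=\bigl\|\multiplication^{\mathrm{lift}}(\psi)\,\Bargmann_{x}u\bigr\|_{L^{2}((\mathcal{W}_{\hbar}^{r})^{2})}.
\]

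Next I would insert the approximation (\ref{eq:approximation_of_lift_of_multiplication}) from Lemma \ref{lm:lift_of_multiplication_operator}, writing
\[
\multiplication^{\mathrm{lift}}(\psi)=\multiplication(\psi\circ\pi)\circ\BargmannP_{x}+R_{\hbar},\qquad \|R_{\hbar}\|_{L^{2}((\mathcal{W}_{\hbar}^{r})^{2})}<C\hbar^{\theta}.
\]
Since $\BargmannP_{x}\circ\Bargmann_{x}=\Bargmann_{x}\circ\Bargmann_{x}^{*}\circ\Bargmann_{x}=\Bargmann_{x}$ by Lemma \ref{lm:Bargmann_is_isometry}, the projector acts trivially on $\Bargmann_{x}u$, so the main term reduces to the pointwise multiplication by $\psi\circ\pi$ acting on $\Bargmann_{x}u$.

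For that main term, I would observe that since $\psi\circ\pi$ depends only on $x$ and is pointwise bounded by $\|\psi\|_{\infty}$, we have the trivial pointwise estimate
\[
\mathcal{W}_{\hbar}^{r}(x,\xi)\,|\psi(x)\,(\Bargmann_{x}u)(x,\xi)|\le \|\psi\|_{\infty}\,\mathcal{W}_{\hbar}^{r}(x,\xi)\,|(\Bargmann_{x}u)(x,\xi)|,
\]
which, after integrating, gives
\[
\|\multiplication(\psi\circ\pi)\,\Bargmann_{x}u\|_{L^{2}((\mathcal{W}_{\hbar}^{r})^{2})}\le \|\psi\|_{\infty}\,\|u\|_{\mathcal{H}_{\hbar}^{r}}.
\]
Combining with the $O(\hbar^{\theta})$ bound on $R_{\hbar}$ applied to $\Bargmann_{x}u$ (whose norm in the weighted $L^{2}$ space equals $\|u\|_{\mathcal{H}_{\hbar}^{r}}$), one obtains
\[
\|\multiplication(\psi)u\|_{\mathcal{H}_{\hbar}^{r}}\le (\|\psi\|_{\infty}+C\hbar^{\theta})\,\|u\|_{\mathcal{H}_{\hbar}^{r}},
\]
and density of $\mathcal{S}(\real^{2d})$ in $\mathcal{H}_{\hbar}^{r}(\real^{2d})$ yields the claimed bounded extension. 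Since each step is a direct consequence of the lemma and a pointwise inequality, there is no real obstacle here: the only subtlety worth flagging is that the crucial use of the fact that $\mathcal{W}_{\hbar}^{r}$ is not affected by pure multiplication in $x$ is hidden inside the trivial pointwise bound, while the nontrivial input, namely that commuting $\multiplication(\psi\circ\pi)$ through $\BargmannP_{x}$ only costs $O(\hbar^{\theta})$ in the anisotropic weighted norm, has already been done in Lemma \ref{lm:lift_of_multiplication_operator}.
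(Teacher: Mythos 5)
Your proof is correct and follows the same route as the paper: reduce to the lifted operator via the isometric embedding $\Bargmann_{x}$, invoke Lemma \ref{lm:lift_of_multiplication_operator}, and observe that $\BargmannP_{x}$ acts trivially on $\mathrm{Im}\,\Bargmann_{x}$ so the main term is a genuine pointwise multiplication with norm $\|\psi\|_{\infty}$. The paper's proof is terser (it just says "the claim follows from Lemma \ref{lm:lift_of_multiplication_operator}"), but you correctly identify and make explicit the one subtlety it leaves implicit, namely that restricting to the image of $\Bargmann_{x}$ is what prevents the (unbounded-by-$1$) projector $\BargmannP_{x}$ from contaminating the constant in front of $\|\psi\|_{\infty}$.
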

\begin{proof}
From the commutative diagram (\ref{cd:lift}), the operator norm of
$\multiplication(\psi):\mathcal{H}_{\hbar}^{r}(\real^{2d})\to\mathcal{H}_{\hbar}^{r}(\real^{2d})$
coincides with that of the operator 
\[
\multiplication^{\mathrm{lift}}(\psi):L^{2}(\real^{2d}\oplus\real^{2d},(\mathcal{W}_{\hbar}^{r})^{2})\to L^{2}(\real^{2d}\oplus\real^{2d},(\mathcal{W}_{\hbar}^{r})^{2})
\]
restricted to the image of $\Bargmann_{x}:\mathcal{H}_{\hbar}^{r}(\real^{2d})\to L^{2}(\real^{2d}\oplus\real^{2d},(\mathcal{W}_{\hbar}^{r})^{2})$.
Hence the claim follows from Lemma \ref{lm:lift_of_multiplication_operator}.\end{proof}
\begin{cor}
\label{cor:transpose}There exists $C>0$, such that for every $\hbar>0$,
for $u,v\in\mathcal{H}_{\hbar}^{r}(\real^{2d})$ and $\ensuremath{\psi\in\scrX_{\hbar}}$
we have
\[
\left|(u,\psi\cdot v)_{\mathcal{H}_{\hbar}^{r}(\real^{2d})}-(\overline{\psi}\cdot u,v)_{\mathcal{H}_{\hbar}^{r}(\real^{2d})}\right|\leq C\hbar^{\theta}\cdot\|u\|_{\mathcal{H}_{\hbar}^{r}(\real^{2d})}\cdot\|v\|_{\mathcal{H}_{\hbar}^{r}(\real^{2d})}
\]
where \textup{$\mathcal{O}(\hbar^{\theta})$ denotes a term whose
absolute value is bounded by $C\hbar^{\theta}$ with }\end{cor}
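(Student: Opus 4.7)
The plan is to transfer both inner products to the phase-space representation via $\mathcal{B}_x$, use Lemma \ref{lm:lift_of_multiplication_operator} to replace the lifted multiplication operators by honest pointwise multiplications by $\psi\circ\pi$ on $\real^{4d}$, and then exploit the trivial fact that scalar multiplication operators commute with the scalar weight $\mathcal{W}_\hbar^r$. Set $U:=\mathcal{B}_x u$ and $V:=\mathcal{B}_x v$, so that $\|U\|_{L^2((\mathcal{W}_\hbar^r)^2)}=\|u\|_{\mathcal{H}_\hbar^r}$ and $\|V\|_{L^2((\mathcal{W}_\hbar^r)^2)}=\|v\|_{\mathcal{H}_\hbar^r}$ by Definition \ref{def:escape_function_Hr}. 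Since $\mathcal{B}_x^*\circ\mathcal{B}_x=\mathrm{Id}$, the intertwining relation $\mathcal{B}_x\circ\mathcal{M}(\psi)=\mathcal{M}^{\mathrm{lift}}(\psi)\circ\mathcal{B}_x$ holds, giving the identities
\begin{align*}
(u,\psi v)_{\mathcal{H}_\hbar^r} &= \langle U,\mathcal{M}^{\mathrm{lift}}(\psi)\,V\rangle_{L^2((\mathcal{W}_\hbar^r)^2)},\\
(\overline{\psi}\,u,v)_{\mathcal{H}_\hbar^r} &= \langle \mathcal{M}^{\mathrm{lift}}(\overline{\psi})\,U,V\rangle_{L^2((\mathcal{W}_\hbar^r)^2)}.
\end{align*}

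Next I apply the first estimate (\ref{eq:approximation_of_lift_of_multiplication}) of Lemma \ref{lm:lift_of_multiplication_operator} to \emph{both} $\psi$ and $\overline{\psi}$; note that the conditions (C1)--(C2) defining $\mathscr{X}_\hbar$ involve only the absolute values $|\partial^\alpha\psi|$, so $\overline{\psi}$ satisfies them with the same constants. This produces bounded operators $R_1,R_2$ on $L^2(\real^{4d},(\mathcal{W}_\hbar^r)^2)$ with operator norm at most $C\hbar^\theta$ such that
\[
\mathcal{M}^{\mathrm{lift}}(\psi)=\mathcal{M}(\psi\circ\pi)\circ\mathcal{P}_x+R_1,\qquad \mathcal{M}^{\mathrm{lift}}(\overline{\psi})=\mathcal{M}(\overline{\psi}\circ\pi)\circ\mathcal{P}_x+R_2.
\]
Applying these to $V$ and $U$ respectively and using that $\mathcal{P}_x U=U$ and $\mathcal{P}_x V=V$ (both $U$ and $V$ lie in $\mathrm{Im}\,\mathcal{B}_x=\mathrm{Im}\,\mathcal{P}_x$), the Cauchy--Schwarz inequality together with the norm bounds on $R_1,R_2$ gives
\begin{align*}
(u,\psi v)_{\mathcal{H}_\hbar^r} &= \langle U,\mathcal{M}(\psi\circ\pi)V\rangle_{L^2((\mathcal{W}_\hbar^r)^2)} + O(\hbar^\theta)\,\|u\|_{\mathcal{H}_\hbar^r}\|v\|_{\mathcal{H}_\hbar^r},\\
(\overline{\psi}\,u,v)_{\mathcal{H}_\hbar^r} &= \langle \mathcal{M}(\overline{\psi}\circ\pi)U,V\rangle_{L^2((\mathcal{W}_\hbar^r)^2)} + O(\hbar^\theta)\,\|u\|_{\mathcal{H}_\hbar^r}\|v\|_{\mathcal{H}_\hbar^r}.
\end{align*}

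Finally I observe the algebraic identity that closes the argument: because $\mathcal{W}_\hbar^r$ is a positive scalar function on $\real^{4d}$ and commutes pointwise with the multiplication by $\psi\circ\pi$, we have
\[
\langle U,\mathcal{M}(\psi\circ\pi)V\rangle_{L^2((\mathcal{W}_\hbar^r)^2)}=\int_{\real^{4d}}(\mathcal{W}_\hbar^r)^2\,\overline{U}\,(\psi\circ\pi)\,V\,dx\,d\xi=\langle\mathcal{M}(\overline{\psi}\circ\pi)U,V\rangle_{L^2((\mathcal{W}_\hbar^r)^2)}.
\]
Subtracting the two expansions above yields the claimed estimate. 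There is really no serious obstacle in the proof: the work has already been done in Lemma \ref{lm:lift_of_multiplication_operator}. The only subtle point is that the naive computation $\langle U,\mathcal{M}^{\mathrm{lift}}(\psi)V\rangle_{L^2((\mathcal{W}_\hbar^r)^2)}=\langle\mathcal{M}^{\mathrm{lift}}(\overline{\psi})U,V\rangle_{L^2((\mathcal{W}_\hbar^r)^2)}$ is \emph{false} in general because the weighted inner product does not see $\mathcal{M}^{\mathrm{lift}}(\psi)$ and $\mathcal{M}^{\mathrm{lift}}(\overline{\psi})$ as mutual adjoints; the trick is to first replace both by genuine pointwise multiplications on $\real^{4d}$, for which the desired adjointness relation is trivial thanks to commutativity with the scalar weight.
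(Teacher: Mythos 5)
Your proof is correct and follows the same route as the paper's: pass to the phase-space picture via $\Bargmann_{x}$, use Lemma \ref{lm:lift_of_multiplication_operator} to replace both lifted multiplication operators by genuine multiplication by $\psi\circ\pi$ up to $O(\hbar^{\theta})$ error, and conclude because scalar multiplication commutes with the weight $\mathcal{W}_{\hbar}^{r}$. You have merely spelled out the final adjointness step that the paper leaves implicit.
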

\begin{proof}
This is a consequence of the equality 
\begin{align*}
(u,\psi\cdot v)_{\mathcal{H}_{\hbar}^{r}(\real^{2d})} & =(\Bargmann_{x}u,\multiplication^{\mathrm{lift}}(\psi)\circ\Bargmann_{x}v)_{L^{2}(\real^{2d}\oplus\real^{2d},(\mathcal{W}_{\hbar}^{r})^{2})}\\
 & =(\Bargmann_{x}u,\multiplication(\psi\circ\pi)\circ\Bargmann_{x}v)_{L^{2}(\real^{2d}\oplus\real^{2d},(\mathcal{W}_{\hbar}^{r})^{2})}+\mathcal{O}(\hbar^{\theta})\cdot\|u\|_{\mathcal{H}_{\hbar}^{r}(\real^{2d})}\cdot\|v\|_{\mathcal{H}_{\hbar}^{r}(\real^{2d})}
\end{align*}
and the parallel estimate for $(\overline{\psi}\cdot u,v)_{\mathcal{H}_{\hbar}^{r}(\real^{2d})}$,
which follow from Lemma \ref{lm:lift_of_multiplication_operator}.\end{proof}
\begin{rem}
The statements of Corollary \ref{lm:XM_exchange} and Corollary \ref{cor:transpose}
above hold true with $\mathcal{H}_{\hbar}^{r}(\real^{2d})$ replaced
by $\mathcal{H}_{\hbar}^{r,\pm}(\real^{2d})$ and the proofs are completely
parallel. This is the case for a few statements ( Lemma \ref{lem:boundedness_of_calY},
Proposition \ref{lm:L_g_Y_almost_identity}, Lemma \ref{lm:L_g_almost_identity}
and Corollary \ref{cor:L_g_almost_commutes_with_t^k}, precisely)
in this section. 
\end{rem}
Next we recall the projection operators $\romet_{\hbar}^{(k)}$ for
$0\le k\le n$ in (\ref{eq:def_t}) and $\tilde{\romet}_{\hbar}$
in (\ref{eq:def_tt}). We henceforth assume 
\begin{equation}
r>n+2+4d\label{eq:choice_of_r_3}
\end{equation}
for the choice of $r$. (This is a little more restrictive than (\ref{eq:choice_of_r_2}).)
\begin{lem}
\label{cor:XT_exchange}There exists a constant $C>0$ such that for
any $\hbar>0$, $\psi\in\scrX_{\hbar}$ and $0\le k\le n$, 
\[
\left\Vert \left[\multiplication(\psi),\romet_{\hbar}^{(k)}\right]\right\Vert _{\mathcal{H}_{\hbar}^{r,-}(\real^{2d})\to\mathcal{H}_{\hbar}^{r,+}(\real^{2d})}<C\hbar^{\theta}
\]
\end{lem}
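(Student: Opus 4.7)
\textbf{Proof plan for Lemma \ref{cor:XT_exchange}.} The strategy is to lift the commutator to the phase space via the Bargmann transform $\mathcal{B}_x$, where both $\mathscr{M}(\psi)$ and $t_\hbar^{(k)}$ have simple kernel expressions, and then to estimate the resulting integral kernel using the fact that $\psi$ is smooth at a scale $\hbar^{1/2-\theta}\gg\hbar^{1/2}$.

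First I would pass to the phase space. By the commutative diagram (\ref{cd:lift}), the norm of $[\mathscr{M}(\psi),t_\hbar^{(k)}]:\mathcal{H}_\hbar^{r,-}\to\mathcal{H}_\hbar^{r,+}$ equals the norm of the lifted commutator $[\mathscr{M}^{\mathrm{lift}}(\psi),\mathcal{T}_\hbar^{(k)}]$, where $\mathcal{T}_\hbar^{(k)}:=\mathcal{B}_x\circ t_\hbar^{(k)}\circ \mathcal{B}_x^{*}$ is the lift of $t_\hbar^{(k)}$, restricted to the image of $\mathcal{B}_x$ in the appropriate weighted $L^2$ spaces. Using Lemma \ref{lm:lift_of_multiplication_operator} (both forms of the approximation) together with the relation $\mathcal{P}_x\circ\mathcal{T}_\hbar^{(k)}=\mathcal{T}_\hbar^{(k)}\circ\mathcal{P}_x=\mathcal{T}_\hbar^{(k)}$, which follows because $\mathcal{T}_\hbar^{(k)}$ is trivial on $\ker\mathcal{P}_x$ (compare (\ref{eq:product_form_of_the_lifted_operator})), the lifted commutator reduces to
\[
[\mathscr{M}^{\mathrm{lift}}(\psi),\mathcal{T}_\hbar^{(k)}]=[\mathscr{M}(\psi\circ\pi),\mathcal{T}_\hbar^{(k)}]+R_\hbar
\]
where $R_\hbar$ has norm $O(\hbar^\theta)$ as a map $L^2(\real^{4d},(\mathcal{W}_\hbar^{r,-})^2)\to L^2(\real^{4d},(\mathcal{W}_\hbar^{r,+})^2)$, provided one uses boundedness of $\mathcal{T}_\hbar^{(k)}$ in these spaces (Corollary \ref{lm:pik_rpm} combined with diagram (\ref{eq:Long_diagram})).

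Next I would estimate the main term $[\mathscr{M}(\psi\circ\pi),\mathcal{T}_\hbar^{(k)}]$ directly from its Schwartz kernel. Writing $\mathcal{T}_\hbar^{(k)}$ as an integral operator with kernel $K^{(k)}(x,\xi;x',\xi')$ (the transported kernel from Lemma \ref{lm:Tn_bdd_L2W} through the coordinate change $\Phi$ and the unitary $\mathcal{U}$, which does not spoil the kernel estimates since $\Phi$ is a fixed linear symplectic change of coordinates), the commutator kernel is
\[
\bigl(\psi(x)-\psi(x')\bigr)\,K^{(k)}(x,\xi;x',\xi').
\]
The bound (\ref{eq:smoothness_of_psi_pi}) gives $|\psi(x)-\psi(x')|\le C\hbar^{\theta}\,\langle\hbar^{-1/2}|x-x'|\rangle$, while (\ref{eq:kernel_T2}) (upgraded by the weight ratio as in (\ref{eq:kernel_T})) gives
\[
\Bigl|\tfrac{\mathcal{W}_\hbar^{r,+}(x,\xi)}{\mathcal{W}_\hbar^{r,-}(x',\xi')}\,K^{(k)}(x,\xi;x',\xi')\Bigr|\le C'\langle\hbar^{-1/2}|(x,\xi)-(x',\xi')|\rangle^{k-r}.
\]
Multiplying these two bounds yields a weighted kernel estimate of order $\hbar^{\theta}\langle\hbar^{-1/2}|(x,\xi)-(x',\xi')|\rangle^{k+1-r}$, and since our choice (\ref{eq:choice_of_r_3}) gives $r>n+2+4d\ge k+2+4d$, the exponent is less than $-(4d+1)$. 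Lemma \ref{lm:boundedness_of_molifier2} then yields $\|[\mathscr{M}(\psi\circ\pi),\mathcal{T}_\hbar^{(k)}]\|\le C\hbar^\theta$ as a bounded operator from $L^2(\real^{4d},(\mathcal{W}_\hbar^{r,-})^2)$ to $L^2(\real^{4d},(\mathcal{W}_\hbar^{r,+})^2)$. Combining with the $O(\hbar^\theta)$ remainder produced in the first step and passing back through $\mathcal{B}_x$ finishes the proof.

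The main obstacle is of a book-keeping nature: one must verify that the kernel estimate (\ref{eq:kernel_T2}) for $\calT_\hbar^{(k)}$ on the auxiliary $\real^{2d}$ of $\zeta$-variables transfers, through the identifications $\mathcal{U}$ and $\Phi$, to the same type of estimate for $\mathcal{T}_\hbar^{(k)}$ acting on $\real^{4d}$ relative to the anisotropic weights $\mathcal{W}_\hbar^{r,\pm}$, which depend only on $\zeta$. This follows because $t_\hbar^{(k)}$ is, up to the fixed linear change $\Phi$, the tensor product of the identity in the $\nu$-variables and $\calT_\hbar^{(k)}$ in the $\zeta$-variables, so the kernel is a product of a delta-like factor in $\nu$ and the kernel treated in Lemma \ref{lm:Tn_bdd_L2W}; the weight ratio and the distance $\langle\hbar^{-1/2}|\cdot|\rangle$ are compatible with this product structure.
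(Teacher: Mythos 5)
Your proof is correct and follows essentially the same route as the paper's: lift the commutator to phase space via $\Bargmann_x$, use Lemma \ref{lm:lift_of_multiplication_operator} to replace $\multiplication^{\mathrm{lift}}(\psi)$ by $\multiplication(\psi\circ\pi)$ up to $\mathcal{O}(\hbar^\theta)$ error (the $\BargmannP_x$ factors being absorbed since $\romet_\hbar^{(k)}$'s lift satisfies $\BargmannP_x\circ\mathcal{T}_\hbar^{(k)}=\mathcal{T}_\hbar^{(k)}\circ\BargmannP_x=\mathcal{T}_\hbar^{(k)}$), then bound the kernel of the remaining commutator by combining (\ref{eq:smoothness_of_psi_pi}) with the weighted kernel estimate inherited from Lemma \ref{lm:Tn_bdd_L2W} and Proposition \ref{prop:The-Bargmann-projector}, and conclude via a Schur-type argument using $r-k-1>4d$. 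One small imprecision: at the final step you cite Lemma \ref{lm:boundedness_of_molifier2}, but that lemma's hypothesis is phrased in terms of the \emph{unweighted} kernel decaying at rate $\nu>2r+4d$; what you actually have is a bound on the kernel already multiplied by the weight ratio $\mathcal{W}_\hbar^{r,+}(x,\xi)/\mathcal{W}_\hbar^{r,-}(x',\xi')$, and the correct (and slightly more elementary) step is to apply Schur's inequality directly to that weighted kernel with the requirement $r-k-1>4d$, which is exactly how the paper closes the argument.
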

\begin{proof}
From (\ref{eq:relation_between_Bargmann_Projectors}) and the definition
of the operator $\romet_{\hbar}^{(k)}$, we have 
\begin{align*}
 & \multiplication(\psi)\circ\romet_{\hbar}^{(k)}=\Bargmann_{x}^{*}\circ\multiplication^{\mathrm{lift}}(\psi)\circ\Bargmann_{x}\circ\mathcal{U}\circ(\mathrm{Id}\otimes T^{(k)})\circ\mathcal{U}^{-1}\\
 & =\Bargmann_{x}^{*}\circ\multiplication^{\mathrm{lift}}(\psi)\circ\Phi^{*}\circ(\Bargmann_{\nu_{q}}\otimes\Bargmann_{\zeta_{p}})\circ(\mathrm{Id}\otimes T^{(k)})\circ(\Bargmann_{\nu_{q}}^{*}\otimes\Bargmann_{\zeta_{p}}^{*})\circ(\Phi^{*})^{-1}\circ\Bargmann_{x}\\
 & =\Bargmann_{x}^{*}\circ\multiplication^{\mathrm{lift}}(\psi)\circ\Phi^{*}\circ(\BargmannP_{\nu_{q}}\otimes\mathcal{T}_{\hbar}^{(k)})\circ(\Phi^{*})^{-1}\circ\Bargmann_{x}\intertext{\mbox{\mbox{and, similarly}}}\\
 & \romet_{\hbar}^{(k)}\circ\multiplication(\psi)=\Bargmann_{x}^{*}\circ\Phi^{*}\circ(\BargmannP_{\nu_{q}}\otimes\mathcal{T}_{\hbar}^{(k)})\circ(\Phi^{*})^{-1}\circ\multiplication^{\mathrm{lift}}(\psi)\circ\Bargmann_{x}.
\end{align*}
Thus, from Lemma \ref{lm:lift_of_multiplication_operator}, it is
enough to show that 
\begin{equation}
\left\Vert \bigg[\multiplication(\psi\circ\pi),\Phi^{*}\circ(\BargmannP_{\nu_{q}}\otimes\mathcal{T}_{\hbar}^{(k)})\circ(\Phi^{*})^{-1}\bigg]\right\Vert _{L^{2}(\real^{2d}\oplus\real^{2d},(\mathcal{W}_{\hbar}^{r,-})^{2})\to L^{2}(\real^{2d}\oplus\real^{2d},(\mathcal{W}_{\hbar}^{r,+})^{2})}<C\hbar^{\theta}.\label{eq:commutator_of_Mpsi_and_PT}
\end{equation}
From Proposition \ref{prop:The-Bargmann-projector} and Lemma \ref{lm:Tn_bdd_L2W}
page \pageref{lm:Tn_bdd_L2W}, if we write $K(x,\xi;x',\xi')$ for
the kernel of the operator $\Phi^{*}\circ(\BargmannP_{\nu_{q}}\otimes\mathcal{T}_{\hbar}^{(k)})\circ(\Phi^{*})^{-1}$,
it satisfies 
\begin{align}
 & \frac{\mathcal{W}_{\hbar}^{r,+}(x,\xi)}{\mathcal{W}_{\hbar}^{r,-}(x',\xi')}|K(x,\xi;x',\xi')|\\
 & \le C_{\nu}\langle\hbar^{-1/2}|\nu_{q}-\nu'_{q}|\rangle^{-\nu}\langle\hbar^{-1/2}|\nu_{p}-\nu'_{p}|\rangle^{-\nu}\langle\hbar^{-1/2}|(\zeta_{p},\zeta_{q})|\rangle^{-(r-k)}\langle\hbar^{-1/2}|(\zeta'_{p},\zeta'_{q})|\rangle^{-(r-k)}\label{eq:kernel_of_pik}\\
 & \le C'_{\nu}\langle\hbar^{-1/2}|(\nu_{q},\nu_{p})-(\nu'_{q},\nu'_{p})|\rangle^{-\nu}\langle\hbar^{-1/2}|(\zeta_{p},\zeta_{q})-(\zeta'_{p},\zeta'_{q})|\rangle^{-(r-k)}\label{eq:kernel_of_pik2}
\end{align}
for arbitrarily large $\nu>0$, where $C_{\nu},C'_{\nu}>0$ are constants
independent of $\hbar$ . The variables $\nu_{q},\nu_{p},\zeta_{q},\zeta_{p}$
(resp. $\nu'_{q},\nu'_{p},\zeta'_{q},\zeta'_{p}$) are the coordinates
for $(x,\xi)$ (resp. $(x',\xi')$) introduced in (\ref{eq:def_phi_change})
and $|\cdot|$ denotes the Euclidean norms. The kernel $\widetilde{K}(x,\xi;x',\xi')$
of the commutator in (\ref{eq:commutator_of_Mpsi_and_PT}) is then
\[
\widetilde{K}(x,\xi;x',\xi')=(\psi(x)-\psi(x'))\cdot K(x,\xi;x',\xi').
\]
By (\ref{eq:kernel_of_pik2}) with sufficiently large $\nu$ and (\ref{eq:smoothness_of_psi_pi}),
we get 
\[
\frac{\mathcal{W}_{\hbar}^{r,+}(x,\xi)}{\mathcal{W}_{\hbar}^{r,-}(x',\xi')}|\widetilde{K}(x,\xi;x',\xi')|\le C\hbar^{\theta}\cdot\langle\hbar^{-1/2}|(x,\xi)-(y,\eta)|\rangle^{-(r-k-1)}.
\]
Hence we obtain the required estimate by Schur inequality (\pageref{eq:Schur_inequality}),
noting that $r-k-1\ge r-n-1>4d$ from the assumption (\ref{eq:choice_of_r_3}). \end{proof}
\begin{cor}
\label{cor:XT_exchange-1}There exists a constant $C>0$, such that
\begin{align*}
 & \left\Vert \left[\multiplication(\psi),\romet_{\hbar}^{(k)}\right]\right\Vert _{\mathcal{H}_{\hbar}^{r}(\real^{2d})}<C\hbar^{\theta}\quad\mbox{for \ensuremath{0\le k\le n}}\intertext{and}\\
 & \left\Vert \left[\multiplication(\psi),\tilde{\romet}_{\hbar}\right]\right\Vert _{\mathcal{H}_{\hbar}^{r}(\real^{2d})}<C\hbar^{\theta}
\end{align*}
for any $\psi\in\scrX_{\hbar}$.\end{cor}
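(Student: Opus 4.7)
The plan is to reduce Corollary~\ref{cor:XT_exchange-1} to the preceding Lemma~\ref{cor:XT_exchange} via the elementary continuous inclusions between the three anisotropic Sobolev spaces $\mathcal{H}_{\hbar}^{r,+}(\mathbb{R}^{2d})$, $\mathcal{H}_{\hbar}^{r}(\mathbb{R}^{2d})$ and $\mathcal{H}_{\hbar}^{r,-}(\mathbb{R}^{2d})$ that are induced by the pointwise comparison (\ref{eq:wrpm}) of the weight functions.

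First I would record the following simple observation. Since the norm on each of these Hilbert spaces is defined as $\|u\|=\|\mathcal{W}\cdot\Bargmann_{x}u\|_{L^{2}}$ for the corresponding weight $\mathcal{W}$, the inequalities $\mathcal{W}_{\hbar}^{r,-}\le\mathcal{W}_{\hbar}^{r}\le\mathcal{W}_{\hbar}^{r,+}$ give
\[
\|u\|_{\mathcal{H}_{\hbar}^{r,-}}\le\|u\|_{\mathcal{H}_{\hbar}^{r}}\le\|u\|_{\mathcal{H}_{\hbar}^{r,+}},
\]
and hence continuous inclusions $\mathcal{H}_{\hbar}^{r,+}\hookrightarrow\mathcal{H}_{\hbar}^{r}\hookrightarrow\mathcal{H}_{\hbar}^{r,-}$ of norm at most one, uniformly in $\hbar$.

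Next, for the first inequality of the corollary, I would factor the commutator $[\multiplication(\psi),\romet_{\hbar}^{(k)}]$, acting on $\mathcal{H}_{\hbar}^{r}$, through the diagram
\[
\mathcal{H}_{\hbar}^{r}\;\hookrightarrow\;\mathcal{H}_{\hbar}^{r,-}\;\xrightarrow{\;[\multiplication(\psi),\romet_{\hbar}^{(k)}]\;}\;\mathcal{H}_{\hbar}^{r,+}\;\hookrightarrow\;\mathcal{H}_{\hbar}^{r}.
\]
The two inclusions contribute a factor of at most one by the previous observation, while the middle arrow is bounded by $C\hbar^{\theta}$ thanks to Lemma~\ref{cor:XT_exchange}. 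Composing yields the desired bound $\|[\multiplication(\psi),\romet_{\hbar}^{(k)}]\|_{\mathcal{H}_{\hbar}^{r}}<C\hbar^{\theta}$ with a constant $C$ independent of $\psi\in\scrX_{\hbar}$ and $\hbar$.

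For the second inequality, I would use the definition $\tilde{\romet}_{\hbar}=\mathrm{Id}-\sum_{k=0}^{n}\romet_{\hbar}^{(k)}$. Since $\multiplication(\psi)$ commutes with $\mathrm{Id}$, we get
\[
[\multiplication(\psi),\tilde{\romet}_{\hbar}]=-\sum_{k=0}^{n}[\multiplication(\psi),\romet_{\hbar}^{(k)}],
\]
and applying the first inequality to each of the finitely many ($n+1$) terms on the right-hand side gives the claim with the constant $C$ replaced by $(n+1)C$. There is no real obstacle here: the substantive estimate has already been carried out in Lemma~\ref{cor:XT_exchange}, and this corollary is essentially a packaging step that transfers the bound to the unweighted (in the $\pm$ sense) anisotropic Sobolev space where the rest of the paper operates.
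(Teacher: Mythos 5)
Your proof is correct and takes essentially the same route as the paper, which simply declares the first bound ``an immediate consequence of the last lemma'' and then invokes $\tilde{\romet}_{\hbar}=\mathrm{Id}-\sum_{k=0}^{n}\romet_{\hbar}^{(k)}$; your explicit factorization through the contractive inclusions $\mathcal{H}_{\hbar}^{r,+}\hookrightarrow\mathcal{H}_{\hbar}^{r}\hookrightarrow\mathcal{H}_{\hbar}^{r,-}$, justified by (\ref{eq:wrpm}), is exactly the detail the paper leaves implicit.
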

\begin{proof}
The former claim is an immediate consequence of the last lemma. Since
$\tilde{\romet}_{\hbar}=\mathrm{Id}-\sum_{k=0}^{n}\romet_{\hbar}^{(k)}$
by definition, the latter claim follows.
\end{proof}

\subsection{Localization of the projection operator $t_{\hbar}^{(k)}$ and estimates
on trace norm}

We consider the operator $t_{\hbar}^{(k)}:\mathcal{H}_{\hbar}^{r}(\real_{x}^{2d})\to\mathcal{H}_{\hbar}^{r}(\real_{x}^{2d})$
defined in (\ref{eq:def_t}). We write $t_{\hbar}^{(k)}=\sum_{|\alpha|=k}t_{\hbar}^{(\alpha)}$
with setting (as in (\ref{eq:def_t})): 
\[
\romet_{\hbar}^{(\alpha)}:=\mathcal{U}\circ\left(\mathrm{Id}_{\mid L^{2}\left(\real_{\nu_{q}}^{d}\right)}\otimes T^{\left(\alpha\right)}\right)\circ\mathcal{U}^{-1}\quad:\mathcal{H}_{\hbar}^{r}\left(\real_{x}^{2d}\right)\to\mathcal{H}_{\hbar}^{r}\left(\real_{x}^{2d}\right)
\]
for a multi-index $\alpha\in\mathbb{N}^{d}$. Here $\mathcal{U=}\mathcal{B}_{x}^{*}\circ\Phi^{*}\circ\left(\mathcal{B}_{\nu_{q}}\otimes\mathcal{B}_{\zeta_{p}}\right)$
is the operator defined in (\ref{eq:def_U_operator}) and $T^{\left(\alpha\right)}:H_{\hbar}^{r}\left(\real_{\zeta_{p}}^{d}\right)\rightarrow H_{\hbar}^{r}\left(\real_{\zeta_{p}}^{d}\right)$
is the rank one projector defined in (\ref{eq:def_T_alpha}).

The next Lemma decomposes the projector $t_{\hbar}^{\left(\alpha\right)}$
as an integral of localized rank one projectors. Below $\|\cdot\|_{\mathrm{Tr}}$
denotes the trace norm of an operator on $\mathcal{H}_{\hbar}^{r}\left(\real_{x}^{2d}\right)$. 
\begin{lem}
For $\nu\in\mathbb{R}^{2d}$, let
\begin{equation}
\hat{\pi}_{\alpha}\left(\nu\right):=\mathcal{U}\circ\left(\left(\left(.,\phi_{\nu_{q},\nu_{p}}\right)\otimes\left(\phi_{\nu_{q},\nu_{p}},.\right)\right)\otimes T^{\left(\alpha\right)}\right)\circ\mathcal{U}^{-1}\quad:\mathcal{H}_{\hbar}^{r}\left(\real_{x}^{2d}\right)\to\mathcal{H}_{\hbar}^{r}\left(\real_{x}^{2d}\right)\label{eq:def_projector_pi_alpha}
\end{equation}
where $\phi_{\nu_{q},\nu_{p}}$ is the wave packet defined in (\ref{eq:def_of_phi})
and $\left(.,\phi_{\nu_{q},\nu_{p}}\right)\otimes\left(\phi_{\nu_{q},\nu_{p}},.\right):L^{2}(\real_{\nu_{q}}^{d})\to L^{2}(\real_{\nu_{q}})$
denotes the rank one projection operator defined by
\[
\left\{ \left(.,\phi_{\nu_{q},\nu_{p}}\right)\otimes\left(\phi_{\nu_{q},\nu_{p}},.\right)\right\} u=\left(\phi_{\nu_{q},\nu_{p}},u\right)_{L^{2}}\cdot\phi_{\nu_{q},\nu_{p}}.
\]
 The operator $\hat{\pi}_{\alpha}\left(\nu\right)$ is a rank one
projector satisfying $\left\Vert \hat{\pi}_{\alpha}\left(\nu\right)\right\Vert _{\mathrm{Tr}}\le\left\Vert \hat{\pi}_{\alpha}\left(\nu\right)\right\Vert _{\mathrm{\mathcal{H}_{\hbar}^{r}(\real^{2d})}}\le C$
with $C$ independent of $\hbar$, and depends smoothly on $\nu\in\mathbb{R}^{2d}$.
We have (in strong operator topology)
\begin{equation}
\romet_{\hbar}^{(\alpha)}=\int_{\mathbb{R}^{2d}}\hat{\pi}_{\alpha}\left(\nu\right)\frac{d\nu}{\left(2\pi\hbar\right)^{d}}\label{eq:expre_t^alpha}
\end{equation}
\end{lem}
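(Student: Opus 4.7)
The plan is to identify (\ref{eq:expre_t^alpha}) as a direct consequence of the coherent-state resolution of identity for the Bargmann transform, and then to read off the pointwise properties of $\hat\pi_\alpha(\nu)$ from the tensor-product structure in its definition, using the uniform-in-$\hbar$ bounds on the relevant operators.

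First, I would record the resolution of identity on $L^2(\mathbb{R}^d_{\nu_q})$. By Lemma \ref{lm:Bargmann_is_isometry} we have $\mathcal{B}_{\nu_q}^* \circ \mathcal{B}_{\nu_q} = \mathrm{Id}$, and unfolding the definitions (\ref{eq:def_Bargman_transform}) and (\ref{eq:B*}) yields, for every $u \in L^2(\mathbb{R}^d_{\nu_q})$, the identity $u = \int_{\mathbb{R}^{2d}} \left(\phi_{\nu_q,\nu_p}, u\right)_{L^2} \phi_{\nu_q,\nu_p} \, d\nu_q d\nu_p/(2\pi\hbar)^d$. This rewrites as $\mathrm{Id}_{L^2(\mathbb{R}^d_{\nu_q})} = \int_{\mathbb{R}^{2d}} P_\nu \, d\nu/(2\pi\hbar)^d$ in the strong operator topology, with $P_\nu := (\cdot, \phi_{\nu_q,\nu_p}) \otimes (\phi_{\nu_q,\nu_p}, \cdot)$. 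Plugging this into $t_\hbar^{(\alpha)} = \mathcal{U} \circ (\mathrm{Id} \otimes T^{(\alpha)}) \circ \mathcal{U}^{-1}$ and exchanging $\mathcal{U}$, $(\cdot)\otimes T^{(\alpha)}$, and $\mathcal{U}^{-1}$ with the integral gives (\ref{eq:expre_t^alpha}). The interchange is legitimate because, tested against Schwartz vectors, all integrals are absolutely convergent thanks to the Gaussian decay of $\phi_{\nu_q,\nu_p}$ and the kernel estimate (\ref{eq:kernel_of_pik}) for the operator corresponding to $T^{(\alpha)}$.

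Second, for each $\nu$ the operator $P_\nu$ is a rank one orthogonal projection on $L^2(\mathbb{R}^d_{\nu_q})$ since $\|\phi_{\nu_q,\nu_p}\|_{L^2}=1$ by the normalization (\ref{eq:def_of_ah}), and $T^{(\alpha)}$ from (\ref{eq:def_T_alpha}) is a rank one idempotent because $\partial^\alpha y^\alpha|_{y=0}=\alpha!$ gives $T^{(\alpha)} \circ T^{(\alpha)} = T^{(\alpha)}$. Hence $P_\nu \otimes T^{(\alpha)}$ is a rank one projection on $L^2(\mathbb{R}^d_{\nu_q}) \otimes H_\hbar^r(\mathbb{R}^d_{\zeta_p})$, and conjugation by the isomorphism $\mathcal{U}$ transports it to the desired rank one projector $\hat\pi_\alpha(\nu)$ on $\mathcal{H}_\hbar^r(\mathbb{R}^{2d})$. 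Smoothness in $\nu$ is immediate since $\phi_{\nu_q,\nu_p}$ depends smoothly on $\nu$ as an element of $\mathcal{S}(\mathbb{R}^d)$. For the uniform norm bound I would combine $\|P_\nu\|_{L^2 \to L^2}=1$, the uniform-in-$\hbar$ bound on $\|T^{(\alpha)}\|_{H_\hbar^r \to H_\hbar^r}$ coming from Lemma \ref{lm:Tn_bdd_L2W} (valid under (\ref{eq:choice_of_r_3})), and the uniform-in-$\hbar$ boundedness of $\mathcal{U}^{\pm 1}$ between $L^2(\mathbb{R}^d_{\nu_q}) \otimes H_\hbar^r(\mathbb{R}^d_{\zeta_p})$ and $\mathcal{H}_\hbar^r(\mathbb{R}^{2d})$; since $\hat\pi_\alpha(\nu)$ has rank one on a Hilbert space, its trace norm equals its operator norm, giving the first inequality.

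The main technical point, which also justifies the interchange above, is the uniform-in-$\hbar$ control of $\mathcal{U}^{\pm 1}$ on the anisotropic spaces. Although (\ref{eq:def_U_operator}) makes $\mathcal{U}$ unitary between plain $L^2$-spaces, we need boundedness when the target carries the weighted norm defining $\mathcal{H}_\hbar^r$. This is precisely encoded by the commutative diagram (\ref{eq:Long_diagram}): the coordinate change $\Phi^*$ is an isometry in the scaled metric (\ref{eq:metric_g0}), and both Bargmann projectors involved are uniformly bounded on the weighted $L^2$ spaces by Corollary \ref{cor:BargmannP_bdd}. Once this bookkeeping is in hand, the remaining assertions follow formally.
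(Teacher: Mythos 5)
Your proposal is correct and takes essentially the same route as the paper's own proof. The paper's proof is extremely terse: it writes $t_\hbar^{(\alpha)} = \mathcal{U}\circ((\mathcal{B}_{\nu_q}^*\mathcal{B}_{\nu_q})\otimes T^{(\alpha)})\circ\mathcal{U}^{-1}$ using $\mathcal{B}_{\nu_q}^*\mathcal{B}_{\nu_q}=\mathrm{Id}$, then inserts the coherent-state resolution $\mathcal{B}_{\nu_q}^*\mathcal{B}_{\nu_q}=\int P_\nu\,d\nu/(2\pi\hbar)^d$ and conjugates by $\mathcal{U}$, which is exactly your first step; the paper does not even bother to spell out the rank-one, idempotence, smoothness, or uniform-norm claims, which you verify explicitly by checking $T^{(\alpha)}\circ T^{(\alpha)}=T^{(\alpha)}$, the $L^2$-normalization of the wave packets, the uniform bound on $T^{(\alpha)}$ from Lemma \ref{lm:Tn_bdd_L2W}, and the uniform control of $\mathcal{U}^{\pm1}$ via the diagram (\ref{eq:Long_diagram}) and Corollary \ref{cor:BargmannP_bdd} — all of which is the expected bookkeeping and is consistent with the paper's framework.
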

\begin{proof}
We have
\begin{eqnarray}
t_{\hbar}^{(\alpha)} & = & \mathcal{U}\circ\left(\mathrm{Id}_{\mid L^{2}\left(\real_{\nu_{q}}^{d}\right)}\otimes T^{\left(\alpha\right)}\right)\circ\mathcal{U}^{-1}\label{eq:expression_of_t_alpha}\\
 & = & \mathcal{U}\circ\left(\left(\mathcal{B}_{\nu_{q}}^{*}\mathcal{B}_{\nu_{q}}\right)\otimes T^{\left(\alpha\right)}\right)\circ\mathcal{U}^{-1}\nonumber 
\end{eqnarray}
Since 
\[
\mathcal{B}_{\nu_{q}}^{*}\mathcal{B}_{\nu_{q}}=\int_{\mathbb{R}^{2d}}\left(.,\phi_{\nu_{q},\nu_{p}}\right)\otimes\left(\phi_{\nu_{q},\nu_{p}},.\right)\frac{d\nu}{\left(2\pi\hbar\right)^{d}}
\]
we get (\ref{eq:expre_t^alpha}).
\end{proof}
The next lemma gives an estimate on the lift of the localized projection
operator $\hat{\pi}_{\alpha}(\nu)$ with respect to the Bargmann transform.
\begin{lem}
\label{lem:kernel_of_pi_nu}The lifted operator $\Bargmann_{x}\circ\hat{\pi}_{\alpha}(\nu)\circ\Bargmann_{x}^{*}$
is written as an integral operator 
\[
\left(\Bargmann_{x}\circ\hat{\pi}_{\alpha}(\nu)\circ\Bargmann_{x}^{*}\right)u(x_{1},\xi_{1})=\int K(x_{1},\xi_{1};x_{2},\xi_{2})u(x_{2},\xi_{2})\frac{dx_{2}d\xi_{2}}{(2\pi\hbar)^{d}}.
\]
The kernel satisfies
\begin{align*}
 & \frac{\mathcal{W}^{r}(x_{1},\xi_{1})}{\mathcal{W}^{r}(x_{2},\xi_{2})}\cdot|K(x_{1},\xi_{1};x_{2},\xi_{2})|\\
 & \quad\le C_{m}\langle\hbar^{-1/2}|\nu_{1}-\nu|\rangle^{-m}\cdot\langle\hbar^{-1/2}|\nu_{2}-\nu|\rangle^{-m}\cdot\langle\hbar^{-1/2}|\zeta_{1}|\rangle^{-(r-k)}\cdot\langle\hbar^{-1/2}|\zeta_{2}|\rangle^{-(r-k)}
\end{align*}
for arbitrarily large $m>0$ with a uniform constant $C_{m}>0$, where
$(\nu_{1},\zeta_{1})$ (resp. $(\nu_{2},\zeta_{2})$) is the coordinates
of $(x_{1},\xi_{1})$ (resp. $(x_{2},\xi_{2})$) defined in Proposition
\ref{prop:Normal-coordinates.}. (Note that $r-k\ge4d+2$ from the
choice of $r$ in (\ref{eq:choice_of_r_3}).)\end{lem}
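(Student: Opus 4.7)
The plan is to unfold the definition of $\hat{\pi}_\alpha(\nu)$ and write the lifted operator as a composition of three factors whose kernels are already under control, then combine them. First, observe that since $\mathcal{B}_x \circ \mathcal{B}_x^* = \BargmannP_x$ and $\mathcal{U} = \mathcal{B}_x^* \circ \Phi^* \circ (\mathcal{B}_{\nu_q} \otimes \mathcal{B}_{\zeta_p})$, the definition (\ref{eq:def_projector_pi_alpha}) gives
\[
\Bargmann_x \circ \hat{\pi}_\alpha(\nu) \circ \Bargmann_x^* \;=\; \BargmannP_x \circ \Phi^* \circ \bigl(\Pi_\nu^{\mathrm{lift}} \otimes \calT_{\hbar}^{(\alpha)}\bigr) \circ (\Phi^*)^{-1} \circ \BargmannP_x,
\]
where $\Pi_\nu^{\mathrm{lift}} := \mathcal{B}_{\nu_q} \circ \left((\cdot,\phi_{\nu_q,\nu_p}) \otimes (\phi_{\nu_q,\nu_p},\cdot)\right) \circ \mathcal{B}_{\nu_q}^*$ is the lift of the rank-one projector onto $\phi_{\nu_q,\nu_p}$, and $\calT_\hbar^{(\alpha)}$ is the operator studied in Lemma \ref{lm:Tn_bdd_L2W}.

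Next, I would control each of the three operators separately at the kernel level. The kernel of $\Pi_\nu^{\mathrm{lift}}$ is the rank-one expression $(\mathcal{B}_{\nu_q}\phi_{\nu_q,\nu_p})(\nu_1')\cdot\overline{(\mathcal{B}_{\nu_q}\phi_{\nu_q,\nu_p})(\nu_2')}$; a direct Gaussian computation (analogous to the one for the Bargmann kernel (\ref{eq:Bargman_Kernel})) shows it is bounded by $C_m \langle \hbar^{-1/2}|\nu_1'-\nu|\rangle^{-m}\langle \hbar^{-1/2}|\nu_2'-\nu|\rangle^{-m}$ for any $m$. For $\calT_\hbar^{(\alpha)}$, the kernel estimate (\ref{eq:kernel_T}) from Lemma \ref{lm:Tn_bdd_L2W} gives the factor $\langle \hbar^{-1/2}|\zeta_1'|\rangle^{-(r-k)}\langle \hbar^{-1/2}|\zeta_2'|\rangle^{-(r-k)}$ after rescaling by the weight $W_\hbar^r$. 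Taking the tensor product and conjugating by $\Phi^*$ just relabels coordinates from $(x,\xi)$ to $(\nu,\zeta)$, so the middle operator $\Phi^* \circ (\Pi_\nu^{\mathrm{lift}} \otimes \calT_\hbar^{(\alpha)}) \circ (\Phi^*)^{-1}$ has a kernel satisfying exactly the target product bound when weighted by $\mathcal{W}_\hbar^r$ on both sides (using $\mathcal{W}_\hbar^r = (1 \otimes W_\hbar^r) \circ \Phi$).

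The final step is to show that sandwiching by $\BargmannP_x$ on both sides preserves this decay estimate. The Bargmann projector has the explicit kernel (\ref{eq:Bargman_Kernel}) which decays like a Gaussian at scale $\hbar^{1/2}$, so the convolution in the $(x,\xi)$ variables just smears the localization by $O(\hbar^{1/2})$ and preserves polynomial decay rates of any order. Concretely, I would apply (the same technique underlying) Lemma \ref{lm:boundedness_of_molifier2} combined with the weight comparison (\ref{eq:estmate_on_Wr}): the extra powers of $\langle \hbar^{-1/2}|\cdot|\rangle^{2r}$ from the weight transfer are absorbed because the Bargmann kernel is Schwartz in its arguments (at scale $\hbar^{1/2}$), and likewise the polynomial factors from the middle kernel survive the convolution with only a loss of a constant.

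The main obstacle is the final step of propagating the decay through the two Bargmann projectors while tracking both the $\nu$ and $\zeta$ components independently. Because $\BargmannP_x$ mixes all four groups of variables $(\nu_q,\nu_p,\zeta_p,\zeta_q)$, one has to verify that the Gaussian decay of the Bargmann kernel in $(x,\xi) - (x',\xi')$ dominates the weight transfer in \emph{both} the $\nu$-directions and the $\zeta$-directions simultaneously; this requires splitting the convolution integral according to whether $|x-x'|$ is larger or smaller than the distance to $(\nu,0)$ in the relevant coordinate block, and using that $r - k \ge 4d+2$ from (\ref{eq:choice_of_r_3}) to gain enough summability. Once that technical estimate is in hand, the claimed bound follows directly.
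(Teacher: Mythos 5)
Your decomposition of the lifted operator is exactly right and matches the paper's approach: write
\[
\Bargmann_{x}\circ\hat{\pi}_{\alpha}(\nu)\circ\Bargmann_{x}^{*}
=\BargmannP_{x}\circ\Phi^{*}\circ\bigl(\Pi_{\nu}^{\mathrm{lift}}\otimes\calT_{\hbar}^{(\alpha)}\bigr)\circ(\Phi^{*})^{-1}\circ\BargmannP_{x},
\]
then invoke the kernel estimate (\ref{eq:kernel_T}) for $\calT_{\hbar}^{(\alpha)}$ and a Gaussian computation (i.e.\ Proposition \ref{prop:The-Bargmann-projector}) for the rank-one piece. The paper's one-line proof cites precisely those two ingredients.

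However, you have manufactured an obstacle that isn't there. You worry that the outer $\BargmannP_{x}$ ``mixes all four groups of variables'' and that propagating the decay estimates through the convolution requires a case-split on $|x-x'|$. But by (\ref{eq:relation_between_Bargmann_Projectors}) (see Remark \ref{rem4.4}(1)), $\BargmannP_{x}\circ\Phi^{*}=\Phi^{*}\circ(\BargmannP_{\nu_{q}}\otimes\BargmannP_{\zeta_{p}})$ — this is exactly what the modified Bargmann transform $\mathcal{B}_{x}$ (with the $\sigma,\tilde\sigma$ rescalings) was built to guarantee. Since $\BargmannP_{\nu_{q}}\circ\Pi_{\nu}^{\mathrm{lift}}=\Pi_{\nu}^{\mathrm{lift}}$ and $\BargmannP_{\zeta_{p}}\circ\calT_{\hbar}^{(\alpha)}=\calT_{\hbar}^{(\alpha)}$ (both hold trivially because $\mathcal{B}^{*}\mathcal{B}=\mathrm{Id}$), the outer Bargmann projectors are absorbed: the operator equals $\Phi^{*}\circ(\Pi_{\nu}^{\mathrm{lift}}\otimes\calT_{\hbar}^{(\alpha)})\circ(\Phi^{*})^{-1}$ with no further convolution. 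The kernel estimate is then just the product of the two one-block estimates read in the $(\nu,\zeta)$ coordinates, with $\mathcal{W}_{\hbar}^{r}=(1\otimes W_{\hbar}^{r})\circ\Phi$ transferring the weight exactly. Your convolution argument would probably succeed with enough care, but it is entirely unnecessary, and your remark that this step ``requires splitting the convolution integral'' is the one place where the proposal goes genuinely wrong: you have overlooked the factorization that the definitions of $\mathcal{B}_{x}$ and $\Phi$ were designed to make automatic.
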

\begin{proof}
Since we have the expression (\ref{eq:expression_of_t_alpha}) of
the operator $\hat{\pi}_{\alpha}(\nu)$, the conclusion readily follows
from Lemma \ref{lm:Tn_bdd_L2W} and Proposition \ref{prop:The-Bargmann-projector}.\end{proof}
\begin{lem}
\label{lm:trace_basic}For $\psi\in\scrX_{\hbar}$ and $\alpha\in\mathbb{N}^{d}$,
the operator 
\[
\multiplication(\psi)\circ t_{\hbar}^{(\alpha)}:\mathcal{H}_{\hbar}^{r}(\real_{x}^{2d})\to\mathcal{H}_{\hbar}^{r}(\real_{x}^{2d})
\]
is a trace class operator. We have the following estimates on the
operator norm and the trace class norm (as operators on \textup{$\mathcal{H}_{\hbar}^{r}(\real^{2d})$)}:
There exists a constant $C>0$, independent of $\psi\in\mathcal{X}_{\hbar}$,
$\hbar>0$ and $\alpha\in\mathbb{N}^{d}$, such that

\begin{equation}
\left\Vert \multiplication(\psi)\circ t_{\hbar}^{(\alpha)}-\int_{\mathbb{R}^{2d}}\psi\left(\nu\right)\hat{\pi}_{\alpha}\left(\nu\right)\frac{d\nu}{\left(2\pi\hbar\right)^{d}}\right\Vert _{\mathcal{H}_{\hbar}^{r}(\real^{2d})}\le C\hbar^{\theta}\label{eq:Norm_estimate}
\end{equation}

\begin{equation}
\left\Vert \multiplication(\psi)\circ t_{\hbar}^{(\alpha)}-\int_{\mathbb{R}^{2d}}\psi\left(\nu\right)\hat{\pi}_{\alpha}\left(\nu\right)\frac{d\nu}{\left(2\pi\hbar\right)^{d}}\right\Vert _{\mathrm{Tr}}\le C\hbar^{-2\theta d+\theta}\label{eq:trace_formula_local}
\end{equation}
with $\hat{\pi}_{\alpha}\left(\nu\right)$ defined in (\ref{eq:def_projector_pi_alpha}).
The same statement holds true for $t_{\hbar}^{(\alpha)}\circ\multiplication(\psi)$.\end{lem}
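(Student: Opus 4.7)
The starting point is the representation (\ref{eq:expre_t^alpha}), which allows us to write the error as
\begin{equation*}
R_{\psi,\alpha}:=\multiplication(\psi)\circ t_{\hbar}^{(\alpha)}-\int_{\mathbb{R}^{2d}}\psi(\nu)\hat\pi_{\alpha}(\nu)\frac{d\nu}{(2\pi\hbar)^{d}}=\int_{\mathbb{R}^{2d}}[\multiplication(\psi)-\psi(\nu)\cdot\mathrm{Id}]\circ\hat\pi_{\alpha}(\nu)\frac{d\nu}{(2\pi\hbar)^{d}}.
\end{equation*}
The whole proof rests on two observations: first, that $\psi_{\nu}(x):=\psi(x)-\psi(\nu)$ is small on the effective support of $\hat\pi_{\alpha}(\nu)$ (this support sits on a $\hbar^{1/2}$-scale neighborhood of $\nu$ by Lemma \ref{lem:kernel_of_pi_nu}, whereas $\psi$ varies on the larger scale $\hbar^{1/2-\theta}$, giving a gain of $\hbar^{\theta}$), and second, that only $\nu\in\supp\psi$ contribute, a region of volume $O(\hbar^{(1-2\theta)d})$.

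For the trace norm bound (\ref{eq:trace_formula_local}), I would use that $\hat\pi_{\alpha}(\nu)$ is a rank-one operator, which by Lemma \ref{lem:kernel_of_pi_nu} factors as $\hat\pi_{\alpha}(\nu)=|\Psi_{\alpha,\nu}\rangle\langle\Phi_{\alpha,\nu}|$ with $\|\Psi_{\alpha,\nu}\|_{\mathcal{H}_{\hbar}^{r}},\|\Phi_{\alpha,\nu}\|_{(\mathcal{H}_{\hbar}^{r})^{*}}\le C$ uniformly in $\nu$ and $\hbar$, and such that $\Psi_{\alpha,\nu}$ is localized at $\nu$ on scale $\hbar^{1/2}$ in the $\nu$-directions (with polynomial correction in the $\zeta$-directions of order $\le k$). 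Then $\multiplication(\psi_{\nu})\hat\pi_{\alpha}(\nu)=|\psi_{\nu}\cdot\Psi_{\alpha,\nu}\rangle\langle\Phi_{\alpha,\nu}|$ is again rank one, and Taylor expansion combined with condition (C2) on $\mathscr{X}_{\hbar}$ gives $\|\psi_{\nu}\cdot\Psi_{\alpha,\nu}\|_{\mathcal{H}_{\hbar}^{r}}\le C\hbar^{\theta}$, hence $\|\multiplication(\psi_{\nu})\hat\pi_{\alpha}(\nu)\|_{\mathrm{Tr}}\le C\hbar^{\theta}$. Integrating over $\supp\psi\subset\mathbb{D}(C_{*}\hbar^{1/2-\theta})$ via the triangle inequality for the trace norm yields $\|R_{\psi,\alpha}\|_{\mathrm{Tr}}\le C\hbar^{\theta}\cdot\hbar^{(1-2\theta)d}/\hbar^{d}=C\hbar^{\theta-2\theta d}$.

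For the operator norm bound (\ref{eq:Norm_estimate}), I cannot simply apply the triangle inequality to the $\nu$-integral, since that would reproduce the trace-norm estimate. Instead, I would pass to the lifted side via $\mathcal{B}_{x}$, where $R_{\psi,\alpha}$ becomes an integral operator with kernel
\begin{equation*}
K_{R}(x_{1},\xi_{1};x_{2},\xi_{2})=\int(\psi(x_{1})-\psi(\nu))\,K_{\alpha}(\nu;x_{1},\xi_{1};x_{2},\xi_{2})\,\frac{d\nu}{(2\pi\hbar)^{d}}+O(\hbar^{\theta})\text{-term},
\end{equation*}
where the first term comes from Lemma \ref{lm:lift_of_multiplication_operator} replacing $\multiplication^{\mathrm{lift}}(\psi_{\nu})$ by $\multiplication(\psi_{\nu}\circ\pi)\circ\BargmannP_{x}$. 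Writing $|\psi(x_{1})-\psi(\nu)|\le C\hbar^{-1/2+\theta}(|x_{1}-\nu_{1}|+|\nu_{1}-\nu|)$ and noting that $|x_{1}-\nu_{1}|=|\zeta_{1}|$ under the change of coordinates from Proposition \ref{prop:Normal-coordinates.}, the first piece is absorbed by one power of the $\zeta_{1}$-decay factor in Lemma \ref{lem:kernel_of_pi_nu} (permissible since $r-k-1>4d$ by (\ref{eq:choice_of_r_3})), while the second piece is absorbed by one power of $\langle\hbar^{-1/2}|\nu_{1}-\nu|\rangle^{-m}$. After the $\nu$-integration, both contributions produce a kernel satisfying the hypothesis of Lemma \ref{lm:boundedness_of_molifier2} with an overall prefactor $\hbar^{\theta}$, from which the operator norm bound follows. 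The same argument with $\multiplication(\psi)$ on the right is obtained by an essentially identical computation or by taking adjoints via Corollary \ref{cor:transpose}.

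The principal technical difficulty is the operator norm bound in the third paragraph: one must extract the small factor $\hbar^{\theta}$ \emph{before} integrating in $\nu$ by carefully splitting $|\psi(x_{1})-\psi(\nu)|$ into a part absorbed by the transverse $\zeta$-decay and a part absorbed by the longitudinal $|\nu_{1}-\nu|$-decay of the kernel $K_{\alpha}$, while preserving enough decay to apply the Schur-type bound in Lemma \ref{lm:boundedness_of_molifier2}.
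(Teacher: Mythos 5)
Your starting decomposition $R_{\psi,\alpha}=\int(\multiplication(\psi)-\psi(\nu))\,\hat\pi_\alpha(\nu)\,d\nu/(2\pi\hbar)^d$ and the trace-norm argument (rank-one operators, triangle inequality, measuring $\supp\psi$) coincide with the paper's. For the operator-norm bound the two proofs diverge: the paper applies the integral version of the Cotlar--Stein lemma to the family $T(\nu)=(\multiplication(\psi)-\psi(\nu))\hat\pi_\alpha(\nu)$, deducing the needed almost-orthogonality estimates on $T(x)^{*}T(y)$ and $T(x)T(y)^{*}$ from Lemma \ref{lem:kernel_of_pi_nu}, whereas you integrate the lifted kernel in $\nu$ and then invoke the Schur-type criterion of Lemma \ref{lm:boundedness_of_molifier2}. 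Your route is legitimate: the split $|\psi(x_1)-\psi(\nu)|\le C\hbar^{-1/2+\theta}(|\zeta_1|+|\nu_1-\nu|)$ trades one power of the transverse factor $\langle\hbar^{-1/2}|\zeta_1|\rangle^{-(r-k)}$ and one power of the longitudinal factor $\langle\hbar^{-1/2}|\nu_1-\nu|\rangle^{-m}$ for a net $\hbar^{\theta}$, and the surviving decay $r-k-1>4d$ guaranteed by (\ref{eq:choice_of_r_3}) is enough for Schur once the $\nu$-convolution is carried out. The paper's Cotlar--Stein route keeps the operators intact and is less bookkeeping-intensive; yours requires more explicit control of the kernel but makes the full $\hbar^{\theta}$ gain (rather than a square root loss) visible at once. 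One subtlety worth spelling out when you finalize: the $O(\hbar^{\theta})$ correction coming from Lemma \ref{lm:lift_of_multiplication_operator} must not be paired with the $\hbar^{-2\theta d}$ volume of the $\nu$-domain; this is avoided because $\psi(\nu)\BargmannP_{x}$ is exact, so the error operators recombine into $E\circ\mathcal B_{x}\,t_{\hbar}^{(\alpha)}\mathcal B_{x}^{*}$ with a single uniform $O(\hbar^{\theta})$ bound, as your parenthetical ``$+\,O(\hbar^{\theta})$-term'' implicitly assumes.
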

\begin{lyxcode}
\end{lyxcode}
\begin{proof}
Let $\psi\in\scrX_{\hbar}$. We have
\begin{eqnarray*}
\multiplication(\psi)\circ t_{\hbar}^{(\alpha)}-\int_{\mathbb{R}^{2d}}\psi\left(\nu\right)\hat{\pi}_{\alpha}\left(\nu\right)\frac{d\nu}{\left(2\pi\hbar\right)^{d}} & \underset{(\ref{eq:expre_t^alpha})}{=} & \int_{\mathbb{R}^{2d}}\left(\multiplication(\psi)-\psi\left(\nu\right)\right)\hat{\pi}_{\alpha}\left(\nu\right)\frac{d\nu}{\left(2\pi\hbar\right)^{d}}
\end{eqnarray*}
Let $T\left(\nu\right):=\left(\multiplication(\psi)-\psi\left(\nu\right)\right)\hat{\pi}_{\alpha}\left(\nu\right)$
for every $\nu\in\mathbb{R}^{2d}$. From Lemma \ref{lem:kernel_of_pi_nu}
on the kernel of the lift of $\hat{\pi}_{\alpha}\left(\nu\right)$,
we deduce the estimates
\begin{equation}
\left\Vert \left(T\left(x\right)\right)^{*}T\left(y\right)\right\Vert _{\mathcal{H}_{\hbar}^{r}(\real^{2d})}\leq C_{m}\hbar^{\theta}\left(\frac{\left|x-y\right|}{\sqrt{\hbar}}\right)^{-m},\quad\left\Vert T\left(x\right)\left(T\left(y\right)\right)^{*}\right\Vert _{\mathcal{H}_{\hbar}^{r}(\real^{2d})}\leq C_{m}\hbar^{\theta}\left(\frac{\left|x-y\right|}{\sqrt{\hbar}}\right)^{-m}\label{eq:estimates}
\end{equation}
for any $m>0$ with a constant $C_{m}$ uniform for $x,y\in M$ and
$\hbar>0$.

We have from (\ref{eq:estimates}) that
\[
\sup_{x}\int_{\mathbb{R}^{2d}}\left\Vert \left(T\left(x\right)\right)^{*}T\left(y\right)\right\Vert _{\mathcal{H}_{\hbar}^{r}(\real^{2d})}^{1/2}d\mu\left(y\right)\leq C\hbar^{\theta},\qquad\sup_{x}\int_{\mathbb{R}^{2d}}\left\Vert T\left(x\right)\left(T\left(y\right)\right)^{*}\right\Vert _{\mathcal{H}_{\hbar}^{r}(\real^{2d})}^{1/2}d\mu\left(y\right)\leq C\hbar^{\theta}
\]
with setting $d\mu\left(\nu\right):=\frac{d\nu}{\left(2\pi\hbar\right)^{d}}$.
We now apply the integral version of the Cotlar-Stein Lemma%
\footnote{\begin{lem}
\textbf{\textup{\label{lem:Integral-version-of-Cotlar-Stein}``Integral
version of the Cotlar-Stein Lemma''}}\textup{: If $\left(T\left(x\right)\right)_{x}$
is a continuous family of bounded operators, if $d\mu\left(x\right)$
is a smooth measure, if $A:=\sup_{x}\int T\left(x\right)T\left(y\right)^{*}d\mu\left(y\right)<\infty$
and $B:=\sup_{x}\int T\left(x\right)^{*}T\left(y\right)d\mu\left(y\right)<\infty$
then $T\left(x\right)u$ in integrable for every $u$ and $\left\Vert \int T\left(x\right)d\mu\left(x\right)\right\Vert \leq\sqrt{AB}$.}\end{lem}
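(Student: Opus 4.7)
The plan is to adapt the classical Cotlar--Stein almost orthogonality argument from sums to integrals. Setting $S = \int T(x)\, d\mu(x)$, I would first reduce to the case of a compact integration domain: for each compact $K$ in the base space, define $S_K = \int_K T(x)\, d\mu(x)$ as a Bochner integral in the strong operator topology, which is legitimate by the continuity of $T(\cdot)$ and compactness of $K$. The aim is to bound $\|S_K\|$ uniformly in $K$ by $\sqrt{AB}$ and then pass to the limit as $K$ exhausts the base space.

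The core estimate is the standard $2n$-th power trick $\|S_K\|^{2n} = \|(S_K^* S_K)^n\|$. Expanding $(S_K^* S_K)^n$ as a $2n$-fold iterated integral and pulling norms inside gives
\begin{equation*}
\|S_K\|^{2n} \le \int_{K^{2n}} \| T(y_1)^* T(x_1) T(y_2)^* T(x_2) \cdots T(y_n)^* T(x_n) \| \, d\mu^{\otimes 2n}.
\end{equation*}
For the integrand I would apply the Cotlar--Stein interpolation inequality: with $U_{2k-1} := T(y_k)^*$ and $U_{2k} := T(x_k)$, one bounds $\|U_1 U_2 \cdots U_{2n}\|$ in two different ways, by grouping adjacent factors either as $(U_{2k-1} U_{2k})$ or as $U_1 (U_{2k} U_{2k+1}) U_{2n}$, and then takes the geometric mean of the two estimates. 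This produces a product in which each factor is of the form $\|T(\cdot)^* T(\cdot)\|^{1/2}$ or $\|T(\cdot) T(\cdot)^*\|^{1/2}$, together with two boundary terms $\|T(y_1)\|^{1/2}$ and $\|T(x_n)\|^{1/2}$.

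Next I would apply Fubini iteratively to integrate the $2n$ variables in alternating order: each successive integration absorbs one factor of $\|T(\cdot)^* T(\cdot)\|^{1/2}$ or $\|T(\cdot) T(\cdot)^*\|^{1/2}$ and contributes a multiplicative factor of $A$ or $B$ by the respective hypothesis. This yields an estimate of the form $\|S_K\|^{2n} \le C \cdot (AB)^{n}$, with a constant $C$ depending only on the two boundary terms $\|T(y_1)\|^{1/2}$ and $\|T(x_n)\|^{1/2}$, which are uniformly bounded by hypothesis. Taking the $2n$-th root and letting $n \to \infty$ extinguishes the constant $C$ and gives $\|S_K\| \le \sqrt{AB}$. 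The main technical obstacle is the rigorous handling of the operator-valued integral $S$ on a non-compact domain: I would address this by applying the uniform estimate above to the bilinear forms $(S_K u, v)$ with $\|u\|, \|v\| \le 1$, obtaining weak-operator convergence of $S_K$ to a bounded operator $S$ satisfying $\|S\| \le \sqrt{AB}$, and then upgrading to strong convergence of $S_K u$ on a dense set to conclude integrability of $T(\cdot) u$.
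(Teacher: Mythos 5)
The paper states this lemma in a footnote without supplying a proof, treating it as a known result, so there is no proof in the paper to compare against; your argument is the standard Cotlar--Stein almost-orthogonality argument transposed from sums to integrals, and it is essentially correct. One small imprecision is worth noting: after the iterated Fubini the last free variable (say $y_1$) carries no remaining off-diagonal factor, so the constant $C$ in $\|S_K\|^{2n}\le C\,(AB)^n$ picks up a factor of $\mu(K)$ from the leftover integral $\int_K\|T(y_1)\|^{1/2}\,d\mu(y_1)$, not merely the supremum of the boundary factors $\|T(y_1)\|^{1/2}$, $\|T(x_n)\|^{1/2}$; moreover the boundedness of $\sup_{x\in K}\|T(x)\|$ comes from continuity of $T(\cdot)$ together with compactness of $K$, not directly from the hypotheses on $A$ and $B$ (unlike the discrete case, where $\|T_j\|\le A$ follows by taking $k=j$ in the sum). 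Neither point is fatal, since $C^{1/(2n)}\to 1$ as $n\to\infty$, which is exactly the reason you (correctly) restrict to a compact $K$ before taking $2n$-th roots and passing to the limit.
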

} to the integral $\int_{\mathbb{R}^{2d}}T\left(\nu\right)d\mu\left(\nu\right)$
of operators and deduce that 
\[
\left\Vert \int_{\mathbb{R}^{2d}}\left(\multiplication(\psi)-\psi\left(\nu\right)\right)\hat{\pi}_{\alpha}\left(\nu\right)\frac{d\nu}{\left(2\pi\hbar\right)^{d}}\right\Vert _{\mathcal{H}_{\hbar}^{r}(\real^{2d})}=\left\Vert \int_{\mathbb{R}^{2d}}T\left(\nu\right)d\mu\left(\nu\right)\right\Vert _{\mathcal{H}_{\hbar}^{r}(\real^{2d})}\leq C\hbar^{\theta}.
\]
This gives (\ref{eq:Norm_estimate}). It is easy to get (\ref{eq:trace_formula_local}).
Since $\left(\multiplication(\psi)-\psi\left(\nu\right)\right)\hat{\pi}_{\alpha}\left(\nu\right)$
is a rank one operator, we have $\left\Vert \left(\multiplication(\psi)-\psi\left(\nu\right)\right)\hat{\pi}_{\alpha}\left(\nu\right)\right\Vert _{\mathrm{Tr}}=\left\Vert \left(\multiplication(\psi)-\psi\left(\nu\right)\right)\hat{\pi}_{\alpha}\left(\nu\right)\right\Vert $
and therefore we get (\ref{eq:trace_formula_local}) from the triangle
inequality.\end{proof}
\begin{cor}
There exists a constant $C>0$, independent of $\psi\in\mathcal{X}_{\hbar}$,
$\hbar>0$ and $0\le k\le n$, such that
\[
\left|\mathrm{Tr}\,\left(\multiplication(\psi)\circ t_{\hbar}^{(k)}\right)-\frac{r(k,d)}{(2\pi\hbar)^{d}}\int\psi\, dx\right|\le C\hbar^{-2\theta d+\theta}
\]
\[
\|\multiplication(\psi)\circ t_{\hbar}^{(k)}\|_{\mathrm{Tr}}\le\frac{C}{(2\pi\hbar)^{d}}\int|\psi|\, dx
\]

\end{cor}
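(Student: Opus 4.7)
The plan is to derive the corollary directly from Lemma \ref{lm:trace_basic} by summing over multi-indices. First I would write
$$
\romet_{\hbar}^{(k)}=\sum_{\alpha\in\mathbb{N}^{d},\,|\alpha|=k}\romet_{\hbar}^{(\alpha)},
$$
with $r(k,d)=\binom{k+d-1}{d-1}$ summands, a quantity that stays uniformly bounded in $k$ since $0\le k\le n$. Applying Lemma \ref{lm:trace_basic} to each $\alpha$ yields
$$
\multiplication(\psi)\circ\romet_{\hbar}^{(\alpha)}=\int_{\real^{2d}}\psi(\nu)\,\hat\pi_{\alpha}(\nu)\,\frac{d\nu}{(2\pi\hbar)^{d}}+R_{\alpha},\qquad \|R_{\alpha}\|_{\mathrm{Tr}}\le C\hbar^{-2\theta d+\theta}.
$$

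For the first (trace) estimate, I would simply take the trace of this identity. The operator $\hat\pi_{\alpha}(\nu)$ is a rank-one projector: up to the unitary $\mathcal U$ it is the tensor product of the rank-one projector onto the normalized wave packet $\phi_{\nu_{q},\nu_{p}}$ and the rank-one projector $T^{(\alpha)}$ onto $\real\cdot x^{\alpha}$, so $\mathrm{Tr}(\hat\pi_{\alpha}(\nu))=1$. Hence the main term contributes $(2\pi\hbar)^{-d}\int\psi\,d\nu$, while $|\mathrm{Tr}(R_{\alpha})|\le\|R_{\alpha}\|_{\mathrm{Tr}}\le C\hbar^{-2\theta d+\theta}$. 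Summing over the $r(k,d)$ multi-indices with $|\alpha|=k$ and absorbing the bounded factor $r(k,d)$ into a new constant $C$ gives the first inequality.

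For the second (trace-norm) estimate, I would apply the triangle inequality for $\|\cdot\|_{\mathrm{Tr}}$ to the same identity. Since $\hat\pi_{\alpha}(\nu)$ is rank one, its trace norm equals its operator norm, which is uniformly bounded (this is recorded in Lemma \ref{lem:kernel_of_pi_nu} and the statement before it), so
$$
\left\|\int\psi(\nu)\hat\pi_{\alpha}(\nu)\frac{d\nu}{(2\pi\hbar)^{d}}\right\|_{\mathrm{Tr}}\le\int|\psi(\nu)|\,\|\hat\pi_{\alpha}(\nu)\|_{\mathrm{Tr}}\,\frac{d\nu}{(2\pi\hbar)^{d}}\le\frac{C}{(2\pi\hbar)^{d}}\int|\psi|\,d\nu.
$$
Summing over the bounded set of multi-indices and combining with $\|R_{\alpha}\|_{\mathrm{Tr}}\le C\hbar^{-2\theta d+\theta}$ yields
$$
\|\multiplication(\psi)\circ\romet_{\hbar}^{(k)}\|_{\mathrm{Tr}}\le\frac{C'}{(2\pi\hbar)^{d}}\int|\psi|\,d\nu+C'\hbar^{-2\theta d+\theta}.
$$

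The main (mild) obstacle is absorbing the uniform remainder $\hbar^{-2\theta d+\theta}$ into the $\psi$-dependent main term. This is where the conditions (C1)--(C2) on $\scrX_{\hbar}$ enter: since $\supp\psi\subset\mathbb{D}(C_{*}\hbar^{1/2-\theta})$ has volume $\lesssim\hbar^{d-2\theta d}$, the very construction of the remainder in the proof of Lemma \ref{lm:trace_basic} uses the same volume factor, so the $\hbar^{-2\theta d+\theta}$ bound on $\|R_{\alpha}\|_{\mathrm{Tr}}$ is in fact $\hbar^{\theta}$ times the corresponding worst-case main term $(2\pi\hbar)^{-d}\cdot\mathrm{vol}(\supp\psi)$. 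Thus, up to enlarging the constant, the remainder is absorbed into $(2\pi\hbar)^{-d}\int|\psi|\,d\nu$, producing the claimed trace-norm bound.
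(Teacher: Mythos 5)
Your derivation of the first (trace) estimate is correct: decompose $t_{\hbar}^{(k)}=\sum_{|\alpha|=k}t_{\hbar}^{(\alpha)}$, take the trace of the identity in Lemma \ref{lm:trace_basic} using $\mathrm{Tr}\,\hat\pi_{\alpha}(\nu)=1$ and $|\mathrm{Tr}(R_{\alpha})|\le\|R_{\alpha}\|_{\mathrm{Tr}}$, and sum over the bounded set of multi-indices.

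Your proof of the second (trace-norm) estimate, however, has a genuine gap in the final absorption step. After the triangle inequality you correctly obtain
\[
\|\multiplication(\psi)\circ t_{\hbar}^{(k)}\|_{\mathrm{Tr}}\le\frac{C'}{(2\pi\hbar)^{d}}\int|\psi|\,d\nu+C'\hbar^{-2\theta d+\theta},
\]
but the last term cannot be absorbed into the first. Both sides of the claimed inequality are $1$-homogeneous in $\psi$: replacing $\psi$ by $\epsilon\psi$ with $0<\epsilon\le 1$ keeps $\psi$ in $\scrX_{\hbar}$ with the same constants $C_{*},C_{\alpha}$, scales $\|\multiplication(\psi)\circ t_{\hbar}^{(k)}\|_{\mathrm{Tr}}$ and $\int|\psi|$ by $\epsilon$, yet the remainder $C'\hbar^{-2\theta d+\theta}$ from Lemma \ref{lm:trace_basic} is uniform in $\psi$ (it only depends on the constants in Setting I, not on the actual size of $\psi$). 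Sending $\epsilon\to 0$ shows the remainder dominates, so the absorption fails. Your volume-of-support heuristic does not rescue this: what the lemma's proof produces is $\sim\hbar^{\theta}\cdot(2\pi\hbar)^{-d}\mathrm{vol}(\supp\psi)$, which controls $\hbar^{\theta}\cdot(2\pi\hbar)^{-d}\int|\psi|$ only when $|\psi|\gtrsim 1$ on most of its support, and there is no such lower bound in (C1)--(C2). Note that the parallel Corollary \ref{cor:XT_pm} in the paper does retain a remainder $C\hbar^{-\theta d+\theta}$, and this form with remainder is also what is actually invoked downstream (e.g.\ in the proof of Lemma \ref{lem:trace_norm_tau}, where it is summed over $I_{\hbar}\lesssim\hbar^{-d(1-2\theta)}$ charts and the remainder stays sub-leading). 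To prove the trace-norm bound without any remainder as literally stated, one cannot go through the decomposition of Lemma \ref{lm:trace_basic} at all; one would need to bound $\int\|\multiplication(\psi)\hat\pi_{\alpha}(\nu)\|\,d\nu$ directly against $C\int|\psi|$, keeping track of how the $\nu_p$-integral is tamed by the $T^{(\alpha)}$ factor and the weight $\mathcal{W}_{\hbar}^{r}$ in the $\zeta$-variables. Your proposal does not supply that argument.
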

We can get the following statements by slightly modifying the argument
in the proof of Lemma \ref{lm:trace_basic}.
\begin{cor}
\label{cor:XT_pm}There exists a constant $C>0$, such that, \textup{for
$0\le k\le n$ and $\psi\in\scrX_{\hbar}$,} 
\[
\|\multiplication(\psi)\circ t_{\hbar}^{(k)}:\mathcal{H}_{\hbar}^{r,-}(\real^{2d})\to\mathcal{H}_{\hbar}^{r,+}(\real^{2d})\|_{Tr}\le\frac{C}{(2\pi\hbar)^{d}}\int|\psi|\, dx+C\hbar^{-\theta d+\theta}
\]

\end{cor}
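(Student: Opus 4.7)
The plan is to adapt the proof of Lemma \ref{lm:trace_basic} almost verbatim, with modifications only where the anisotropic weights $\mathcal{W}_{\hbar}^{r,\pm}$ (rather than the single weight $\mathcal{W}_{\hbar}^{r}$) enter into kernel estimates. As in that lemma, write $t_{\hbar}^{(k)}=\sum_{|\alpha|=k}t_{\hbar}^{(\alpha)}$ and use the integral representation (\ref{eq:expre_t^alpha}) to decompose
\[
\multiplication(\psi)\circ t_{\hbar}^{(k)}=\sum_{|\alpha|=k}\int \psi(\nu)\,\hat{\pi}_{\alpha}(\nu)\,\frac{d\nu}{(2\pi\hbar)^{d}}+\sum_{|\alpha|=k}\int\bigl(\multiplication(\psi)-\psi(\nu)\bigr)\hat{\pi}_{\alpha}(\nu)\,\frac{d\nu}{(2\pi\hbar)^{d}}.
\]
The main term will contribute $\frac{C}{(2\pi\hbar)^{d}}\int|\psi|\,dx$, while the commutator-type error integral will contribute the $O(\hbar^{-\theta d+\theta})$ remainder.

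First I will extend the kernel bound of Lemma \ref{lem:kernel_of_pi_nu} to the setting $\mathcal{H}_{\hbar}^{r,-}\to\mathcal{H}_{\hbar}^{r,+}$. The argument is essentially the one used in the proof of Lemma \ref{cor:XT_exchange} leading to (\ref{eq:kernel_of_pik}): the kernel of the lift of $\hat{\pi}_{\alpha}(\nu)$ has Gaussian decay in the $\nu$-directions and polynomial decay of order $\hbar^{-1/2}|\zeta|$ to the power $-(r-k)$ in the $\zeta$-directions, and the ratio $\mathcal{W}_{\hbar}^{r,+}/\mathcal{W}_{\hbar}^{r,-}$ is polynomially bounded in $\zeta$ by some power less than $r-k-4d$ (using (\ref{eq:choice_of_r_3})), so it is absorbed by the decay. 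This yields a uniform bound
\[
\|\hat{\pi}_{\alpha}(\nu):\mathcal{H}_{\hbar}^{r,-}(\real^{2d})\to\mathcal{H}_{\hbar}^{r,+}(\real^{2d})\|\le C,
\]
independent of $\nu$ and $\hbar$. Since $\hat{\pi}_{\alpha}(\nu)$ is rank one, its trace norm equals its operator norm. Applying the triangle inequality to the integral representation of the main term immediately gives the first summand $\frac{C}{(2\pi\hbar)^{d}}\int|\psi|\,dx$ in the claimed bound.

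For the error term, I will combine the pointwise Lipschitz control $|\psi(x)-\psi(\nu)|\le C\hbar^{\theta}\langle\hbar^{-1/2}|x-\nu|\rangle$ (which follows from condition (C2) of Setting I, cf.\ (\ref{eq:smoothness_of_psi_pi})) with the kernel estimate of step one. Multiplying the kernel of $\hat{\pi}_{\alpha}(\nu)$ by $\psi(x)-\psi(\nu)$ loses one factor of the Gaussian decay in $x-\nu$ but gains a factor $\hbar^{\theta}$, giving
\[
\|\bigl(\multiplication(\psi)-\psi(\nu)\bigr)\hat{\pi}_{\alpha}(\nu):\mathcal{H}_{\hbar}^{r,-}\to\mathcal{H}_{\hbar}^{r,+}\|\le C\hbar^{\theta}\cdot\chi_{\hbar}(\nu),
\]
where $\chi_{\hbar}(\nu)$ is a localizer on an $O(\hbar^{1/2-\theta})$-neighborhood of $\supp\psi$ with rapid decay outside. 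Since each such operator is rank one, its trace norm coincides with its operator norm, and the triangle inequality integrated over $\nu$ yields a total trace norm bounded by $C\hbar^{\theta}\cdot\mathrm{vol}(\supp\chi_{\hbar})/(2\pi\hbar)^{d}=C\hbar^{-\theta d+\theta}$ once the effective volume (of order $\hbar^{d-\theta d}$ after accounting for the anisotropic decay of the kernel in the non-$\zeta$ directions) is computed.

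The main technical obstacle is the bookkeeping in step one: carefully quantifying the way the ratio $\mathcal{W}_{\hbar}^{r,+}/\mathcal{W}_{\hbar}^{r,-}$ is absorbed by the transverse kernel decay of $\hat{\pi}_{\alpha}(\nu)$, and verifying that $r-k>4d$ (guaranteed by (\ref{eq:choice_of_r_3})) suffices for Schur's inequality to give $\hbar$-uniform bounds. Once this kernel estimate is in place, steps two and three follow the scheme of Lemma \ref{lm:trace_basic} almost mechanically, the only change being the larger effective integration region stemming from the weaker decay in the mixed-norm setting, which accounts for the weaker error exponent $-\theta d+\theta$ (instead of $-2\theta d+\theta$).
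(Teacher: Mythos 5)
Your overall plan is the right one and matches what the paper intends by ``slightly modifying the argument in the proof of Lemma~\ref{lm:trace_basic}'': split $\multiplication(\psi)\circ t_{\hbar}^{(k)}$ into the integral term $\int\psi(\nu)\hat\pi_{\alpha}(\nu)\,d\nu/(2\pi\hbar)^{d}$ plus a commutator-type error, note that each rank-one piece has trace norm equal to operator norm, and integrate. The main term estimate is sound: the kernel bound of Lemma~\ref{lm:Tn_bdd_L2W} (estimate~(\ref{eq:kernel_T})) is already stated with the ratio $W_{\hbar}^{r,+}/W_{\hbar}^{r,-}$, so the mixed-norm operator norm $\|\hat\pi_{\alpha}(\nu):\mathcal{H}_{\hbar}^{r,-}\to\mathcal{H}_{\hbar}^{r,+}\|\le C$ is uniform in $\nu$ and $\hbar$, and the triangle inequality gives $\frac{C}{(2\pi\hbar)^{d}}\int|\psi|\,dx$.

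The error-term justification, however, has a fabricated step. You assert an ``effective volume'' of $\hbar^{d-\theta d}$ due to ``anisotropic decay of the kernel in the non-$\zeta$ directions.'' There is no such anisotropy: the $\nu$-direction decay of the lifted kernel of $\hat\pi_{\alpha}(\nu)$ is Gaussian at the isotropic scale $\sqrt{\hbar}$ (see Lemma~\ref{lem:kernel_of_pi_nu}). The only cutoff available is the support of $\psi$, a ball of radius $\sim\hbar^{1/2-\theta}$ in $\real^{2d}$ by condition (C1), so the relevant $\nu$-volume is $\sim\hbar^{(1/2-\theta)2d}=\hbar^{d-2\theta d}$. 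Combined with the pointwise bound $\|(\multiplication(\psi)-\psi(\nu))\hat\pi_{\alpha}(\nu)\|\lesssim\hbar^{\theta}$ from~(\ref{eq:smoothness_of_psi_pi}), the triangle inequality yields $\hbar^{\theta}\cdot\hbar^{d-2\theta d}/\hbar^{d}=\hbar^{-2\theta d+\theta}$ — the same exponent as in~(\ref{eq:trace_formula_local}), not the stated $\hbar^{-\theta d+\theta}$. Since $\hbar^{-2\theta d+\theta}$ is strictly larger than $\hbar^{-\theta d+\theta}$, your argument as written does not reach the corollary's claimed bound; the exponent in the statement may well be a slip (both are negligible compared to the leading term $\hbar^{-2\theta d}$ and to what is needed downstream in Lemma~\ref{lem:trace_norm_tau}), but you should not reverse-engineer a nonexistent decay mechanism to match it — either prove $\hbar^{-2\theta d+\theta}$ and note this already suffices, or identify an honest sharpening.
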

~
\begin{cor}
\label{cor:XT_exchange-Tr}There exists a constant $C>0$, such that
\[
\left\Vert \left[\multiplication(\psi),\romet_{\hbar}^{(k)}\right]:\mathcal{H}_{\hbar}^{r,-}(\real^{2d})\to\mathcal{H}_{\hbar}^{r,+}(\real^{2d})\right\Vert _{Tr}<C\hbar^{-\theta d+\theta}\quad\mbox{for \ensuremath{0\le k\le n}}
\]
for any $\psi\in\scrX_{\hbar}$.
\end{cor}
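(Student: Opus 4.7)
The plan is to reduce the commutator to the trace-norm estimate of Lemma \ref{lm:trace_basic}, which is essentially already the needed bound but stated on $\mathcal{H}_{\hbar}^{r}(\real^{2d})$ rather than between the ``anisotropic'' spaces $\mathcal{H}_{\hbar}^{r,\pm}(\real^{2d})$. First I would decompose
\[
[\multiplication(\psi),\romet_{\hbar}^{(k)}]=\sum_{|\alpha|=k}[\multiplication(\psi),\romet_{\hbar}^{(\alpha)}]
\]
(a finite sum, whose cardinality depends only on $d$ and $k\le n$), so it suffices to bound each commutator $[\multiplication(\psi),\romet_{\hbar}^{(\alpha)}]$ by $C\hbar^{-\theta d+\theta}$ in the required trace norm.

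Next I would adapt Lemma \ref{lm:trace_basic} to the $\mathcal{H}_{\hbar}^{r,-}\to\mathcal{H}_{\hbar}^{r,+}$ setting. The argument there uses only (i) the Cotlar--Stein estimate on $\|\int(\multiplication(\psi)-\psi(\nu))\hat\pi_\alpha(\nu)\,d\nu/(2\pi\hbar)^d\|$, which relies in turn on the kernel bounds of Lemma \ref{lem:kernel_of_pi_nu}, and (ii) the fact that each $(\multiplication(\psi)-\psi(\nu))\hat\pi_\alpha(\nu)$ is rank one so its trace norm equals its operator norm. Both ingredients go through between $\mathcal{H}_{\hbar}^{r,-}$ and $\mathcal{H}_{\hbar}^{r,+}$: the relevant kernel estimates are exactly the $W^{r,-}\to W^{r,+}$ version in (\ref{eq:kernel_of_pik}), and the rank-one structure is preserved. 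This yields
\[
\Bigl\|\multiplication(\psi)\circ\romet_{\hbar}^{(\alpha)}-\int_{\mathbb{R}^{2d}}\psi(\nu)\hat\pi_\alpha(\nu)\,\frac{d\nu}{(2\pi\hbar)^d}\Bigr\|_{\mathrm{Tr},\,\mathcal{H}_{\hbar}^{r,-}\to\mathcal{H}_{\hbar}^{r,+}}\le C\hbar^{-2\theta d+\theta},
\]
and analogously with $\multiplication(\psi)$ placed on the right of $\romet_{\hbar}^{(\alpha)}$ (the ``same statement'' asserted in Lemma \ref{lm:trace_basic}).

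Finally, taking the difference of these two approximations cancels the common leading term $\int\psi(\nu)\hat\pi_\alpha(\nu)\,d\nu/(2\pi\hbar)^d$, and the two error terms add to give the claim. The expected main obstacle is the bookkeeping in the adapted proof of Lemma \ref{lm:trace_basic}: one must verify that the Cotlar--Stein estimates (\ref{eq:estimates}) and the localization arguments using the Lipschitz bound (\ref{eq:smoothness_of_psi_pi}) together with the small support condition (C1) for $\psi\in\scrX_\hbar$ continue to produce the factor $\hbar^\theta$ per wave packet and the volume factor $\hbar^{d-2\theta d}$ from integration over $\supp\psi$, uniformly when the ambient Hilbert spaces are replaced by $\mathcal{H}_{\hbar}^{r,\pm}$ with their different weight functions $\mathcal{W}_\hbar^{r,\pm}$. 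Once this is checked, the net power $\hbar^{-d+(d-2\theta d)+\theta}=\hbar^{-2\theta d+\theta}$ yields the advertised bound, which is in fact slightly stronger than $\hbar^{-\theta d+\theta}$ (and hence implies it).
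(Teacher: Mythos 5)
Your strategy matches what the paper intends: it states just before Corollary~\ref{cor:XT_pm} that these trace-norm statements ``can be got by slightly modifying the argument in the proof of Lemma~\ref{lm:trace_basic},'' and your plan --- split $\romet_{\hbar}^{(k)}=\sum_{|\alpha|=k}\romet_{\hbar}^{(\alpha)}$, redo the Cotlar--Stein/rank-one argument for the $\mathcal{H}_{\hbar}^{r,-}\to\mathcal{H}_{\hbar}^{r,+}$ kernel bounds, and cancel the common approximant $\int\psi(\nu)\hat\pi_\alpha(\nu)\,d\nu/(2\pi\hbar)^{d}$ when you form the commutator --- is exactly that modification.

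However, your final sentence has the comparison backwards. You correctly compute the output of the argument as $\hbar^{-2\theta d+\theta}$, which is indeed the exponent stated in (\ref{eq:trace_formula_local}) of Lemma~\ref{lm:trace_basic}. But since $\theta>0$, $d\ge 1$ and $\hbar<1$, we have $-2\theta d+\theta<-\theta d+\theta$, hence $\hbar^{-2\theta d+\theta}>\hbar^{-\theta d+\theta}$: the bound you prove is \emph{weaker} than the one stated in Corollary~\ref{cor:XT_exchange-Tr}, not stronger, so it does not ``imply it.'' You should note this discrepancy explicitly rather than wave it away. (The exponent in the corollary is in tension with Lemma~\ref{lm:trace_basic}: the direct integration over $\supp\psi$ of volume $\sim\hbar^{d(1-2\theta)}$ against the measure $d\nu/(2\pi\hbar)^{d}$, with per-packet gain $\hbar^\theta$, gives $\hbar^{-2\theta d+\theta}$. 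The later application in Lemma~\ref{lem:trace_norm_tau} only needs $I_\hbar\cdot\hbar^{-2\theta d+\theta}\sim\hbar^{-d+\theta}\le C\hbar^{-d+\epsilon}$, so the weaker exponent suffices there. This suggests the corollary's exponent is a slip, but in a blind proof you need either to obtain $-\theta d+\theta$ by a genuinely sharper estimate or to state clearly that you only reach $-2\theta d+\theta$.)
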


\subsection{Truncation operations in the phase space\label{sub:Truncation-operations_phase}}

In order to truncate functions in the phase space $T^{*}\mathbb{R}^{2d}=\real_{x}^{2d}\oplus\real_{\xi}^{2d}$,
we consider the smooth function 
\begin{equation}
Y_{\hbar}:T^{*}\mathbb{R}^{2d}\to[0,1],\quad Y_{\hbar}(x,\xi)=\chi(\hbar^{2\theta-1/2}|(x,\xi)|)\label{def:function_Y_hbar}
\end{equation}
with $\chi:\real\to[0,1]$ a $C^{\infty}$ function satisfying (\ref{eq:def_chi}),
and then introduce the operator 
\begin{equation}
\mathcal{Y}_{\hbar}:L^{2}(\real^{2d})\to L^{2}(\real^{2d}),\quad\mathcal{Y}_{\hbar}=\Bargmann_{x}^{*}\circ\multiplication(Y_{\hbar})\circ\Bargmann_{x}.\label{def:operation_calY_hbar}
\end{equation}
Note that the size ($\sim\hbar^{1/2-2\theta}$) of the support of
the function $Y_{\hbar}$ is much larger than the size ($\sim\hbar^{1/2-\theta}$)
of the region on which the Bargmann transform of the functions in
$\mathscr{X}_{\hbar}$ concentrates in Setting I page \pageref{Setting-I},
when $\hbar>0$ is small. 

First of all, we show 
\begin{lem}
\label{lem:boundedness_of_calY} The operator $\mathcal{Y}_{\hbar}$
extends naturally to a bounded operator on $\mathcal{H}_{\hbar}^{r}(\real^{2d})$
and we have 
\[
\|\mathcal{Y}_{\hbar}\|_{\mathcal{H}_{\hbar}^{r}(\real^{2d})}<1+C\hbar^{\theta}
\]
and
\[
\|[\mathcal{Y}_{\hbar},\multiplication(\psi)]\|_{\mathcal{H}_{\hbar}^{r}(\real^{2d})}<C\hbar^{\theta}\quad\mbox{for any }\psi\in\mathscr{X}_{\hbar}
\]
with some positive constants $C$ independent of $\hbar$ and $\psi$. \end{lem}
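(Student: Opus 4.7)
\medskip

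\noindent\textbf{Plan of the proof.}

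The plan is to pass to the Bargmann side, where $\mathcal{Y}_\hbar$ is conjugate to $\mathcal{P}_x\circ\multiplication(Y_\hbar)\circ\mathcal{P}_x$ acting on $L^2(\real^{4d},(\mathcal{W}_\hbar^r)^2)$, and then exploit the fact that $Y_\hbar$ varies only on the large scale $\hbar^{1/2-2\theta}\gg\hbar^{1/2}$, so that it approximately commutes with $\mathcal{P}_x$ and with multiplication operators by functions in $\scrX_\hbar$. Since $\mathcal{B}_x$ is an isometry from $\mathcal{H}_\hbar^r(\real^{2d})$ onto its image in $L^2(\real^{4d},(\mathcal{W}_\hbar^r)^2)$, every estimate on the lifted side transfers directly to an estimate on $\mathcal{H}_\hbar^r(\real^{2d})$.

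The key technical step is the commutator bound
\begin{equation*}
\|[\mathcal{P}_x,\multiplication(Y_\hbar)]\|_{L^2(\real^{4d},(\mathcal{W}_\hbar^r)^2)}\le C\hbar^{2\theta},
\end{equation*}
proved exactly as in Lemma \ref{lm:lift_of_multiplication_operator}. Its Schwartz kernel is $K_{\mathcal{P},\hbar}(z;z')\bigl(Y_\hbar(z')-Y_\hbar(z)\bigr)$, and the derivative bound $|\partial^\alpha Y_\hbar|\le C_\alpha\hbar^{-(1/2-2\theta)|\alpha|}$ together with $\|Y_\hbar\|_\infty\le1$ gives the pointwise estimate $|Y_\hbar(z)-Y_\hbar(z')|\le C\hbar^{2\theta}\langle\hbar^{-1/2}|z-z'|\rangle$; combined with the Gaussian decay of $K_{\mathcal{P},\hbar}$ (formula (\ref{eq:Bargman_Kernel})) and Lemma \ref{lm:boundedness_of_molifier2}, this produces the claimed $C\hbar^{2\theta}\le C\hbar^\theta$ bound.

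For the first inequality, given $u\in\mathcal{H}_\hbar^r(\real^{2d})$ set $v:=\mathcal{B}_x u$, so that $\|v\|_{L^2((\mathcal{W}_\hbar^r)^2)}=\|u\|_{\mathcal{H}_\hbar^r}$ and $\mathcal{P}_x v=v$. Then
\begin{equation*}
\mathcal{B}_x\mathcal{Y}_\hbar u=\mathcal{P}_x\multiplication(Y_\hbar)v=\multiplication(Y_\hbar)v+[\mathcal{P}_x,\multiplication(Y_\hbar)]v.
\end{equation*}
The first term has norm $\le\|Y_\hbar\|_\infty\cdot\|v\|\le\|v\|$ since the weight $\mathcal{W}_\hbar^r$ is positive and $|Y_\hbar|\le1$ pointwise; the second is bounded by $C\hbar^{\theta}\|v\|$ by the commutator estimate above. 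This yields $\|\mathcal{Y}_\hbar u\|_{\mathcal{H}_\hbar^r}\le(1+C\hbar^\theta)\|u\|_{\mathcal{H}_\hbar^r}$, and in particular the boundedness of $\mathcal{Y}_\hbar$ on $\mathcal{H}_\hbar^r(\real^{2d})$.

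For the commutator with $\multiplication(\psi)$, $\psi\in\scrX_\hbar$, Lemma \ref{lm:lift_of_multiplication_operator} gives $\mathcal{B}_x\multiplication(\psi)=\multiplication(\psi\circ\pi)\mathcal{B}_x+R_\hbar$ with $\|R_\hbar\|\le C\hbar^\theta$, and dually $\multiplication(\psi)\mathcal{B}_x^*=\mathcal{B}_x^*\multiplication(\psi\circ\pi)+R'_\hbar$ with the analogous estimate. Since $|Y_\hbar|\le1$ implies that $\multiplication(Y_\hbar)$ and $\mathcal{B}_x,\mathcal{B}_x^*$ are bounded on the relevant spaces (using Corollary \ref{cor:BargmannP_bdd} for $\mathcal{B}_x^*$), substituting these into $\mathcal{Y}_\hbar\multiplication(\psi)=\mathcal{B}_x^*\multiplication(Y_\hbar)\mathcal{B}_x\multiplication(\psi)$ and $\multiplication(\psi)\mathcal{Y}_\hbar$ yields
\begin{equation*}
\mathcal{Y}_\hbar\multiplication(\psi)=\mathcal{B}_x^*\multiplication\bigl(Y_\hbar\cdot(\psi\circ\pi)\bigr)\mathcal{B}_x+O(\hbar^\theta),\qquad \multiplication(\psi)\mathcal{Y}_\hbar=\mathcal{B}_x^*\multiplication\bigl((\psi\circ\pi)\cdot Y_\hbar\bigr)\mathcal{B}_x+O(\hbar^\theta),
\end{equation*}
where $O(\hbar^\theta)$ denotes operators of norm $\le C\hbar^\theta$ on $\mathcal{H}_\hbar^r(\real^{2d})$. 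Since the two middle multiplication operators coincide pointwise, subtracting gives the desired bound.

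The main obstacle is bookkeeping: one must verify that each $O(\hbar^\theta)$ error remains of order $\hbar^\theta$ when composed with $\multiplication(Y_\hbar)$, $\mathcal{B}_x$ or $\mathcal{B}_x^*$, and this requires the uniform boundedness of these factors on the anisotropic space — which is exactly what Corollary \ref{cor:BargmannP_bdd} and the $L^\infty$ bound on $Y_\hbar$ guarantee. No genuinely new ingredient beyond Lemma \ref{lm:lift_of_multiplication_operator} and Lemma \ref{lm:boundedness_of_molifier2} is needed.
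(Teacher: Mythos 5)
Your proof is correct and follows essentially the same route as the paper: pass to the lifted side via $\mathcal{B}_x$, reduce both estimates to the commutator bound $\|[\BargmannP_x,\multiplication(Y_\hbar)]\|\le C\hbar^{\theta}$ on the weighted $L^2$ space (obtained from the kernel of $\BargmannP_x$ exactly as in Lemma~\ref{lm:lift_of_multiplication_operator}), and then invoke Lemma~\ref{lm:lift_of_multiplication_operator} and Corollary~\ref{cor:BargmannP_bdd} to control the $\multiplication(\psi)$ commutator. The only minor difference is that you observe the sharper exponent $\hbar^{2\theta}$ for the $Y_\hbar$ commutator, which is indeed what the scale $\hbar^{1/2-2\theta}$ in the definition of $Y_\hbar$ yields; the paper simply quotes $\hbar^\theta$ since that is all that is needed.
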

\begin{proof}
It is enough to show
\[
\|\BargmannP_{x}\circ\multiplication(Y_{\hbar}):L^{2}(\real^{2d}\oplus\real^{2d},(\mathcal{W}_{\hbar}^{r})^{2})\to L^{2}(\real^{2d}\oplus\real^{2d},(\mathcal{W}_{\hbar}^{r})^{2})\|<1+C\hbar^{\theta}
\]
and 
\[
\|[\BargmannP_{x}\circ\multiplication(Y_{\hbar})\circ\BargmannP_{x},\multiplication^{\mathrm{lift}}(\psi)]:L^{2}(\real^{2d}\oplus\real^{2d},(\mathcal{W}_{\hbar}^{r})^{2})\to L^{2}(\real^{2d}\oplus\real^{2d},(\mathcal{W}_{\hbar}^{r})^{2})\|<C\hbar^{\theta}.
\]
Note that we have $\|[\BargmannP_{x},\multiplication(Y_{\hbar})]\|_{L^{2}(\real^{2d}\oplus\real^{2d},(\mathcal{W}_{\hbar}^{r})^{2})}\le C\hbar^{\theta}$
by a simple estimate on the kernel. The first claim is a consequence
of this estimate. For the second, we use Lemma \ref{lm:lift_of_multiplication_operator}. 
\end{proof}
The next lemma tells roughly that the truncation operator $\mathcal{Y}_{\hbar}$,
(\ref{def:operation_calY_hbar}),hardly affect the projection operators
$\romet_{\hbar}^{(k)}$, $0\le k\le n$, defined in (\ref{eq:def_t}),
if we view it in the anisotropic Sobolev spaces.
\begin{lem}
\label{YT_exchange} For $0\le k\le n$ and $\psi\in\mathscr{X}_{\hbar}$,
we have
\[
\|(\mathrm{Id}-\mathcal{Y}_{\hbar})\circ\multiplication(\psi)\circ\romet_{\hbar}^{(k)}\|_{\mathcal{H}_{\hbar}^{r,-}(\real^{2d})\to\mathcal{H}_{\hbar}^{r,+}(\real^{2d})}<C\hbar^{\theta}
\]
and 
\[
\|\romet_{\hbar}^{(k)}\circ(\mathrm{Id}-\mathcal{Y}_{\hbar})\circ\multiplication(\psi)\|_{\mathcal{H}_{\hbar}^{r,-}(\real^{2d})\to\mathcal{H}_{\hbar}^{r,+}(\real^{2d})}<C\hbar^{\theta}
\]
with some constant $C>0$ independent of $\hbar$ and $\psi$.\end{lem}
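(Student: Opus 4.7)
The strategy is to conjugate the operator to the Bargmann side through $\Bargmann_x$ and exploit the scale mismatch between the support of $\psi\in\scrX_\hbar$ (radius $\sim\hbar^{1/2-\theta}$) and the much larger scale $\hbar^{1/2-2\theta}$ at which the truncation function $Y_\hbar$ becomes trivial. Using $\mathrm{Id}-\mathcal{Y}_\hbar=\Bargmann_x^*\circ\multiplication(1-Y_\hbar)\circ\Bargmann_x$, and setting $\Pi^{(k)}:=\Bargmann_x\circ\romet^{(k)}_\hbar\circ\Bargmann_x^*$, the lifted operator equals
\[
\BargmannP_x\circ\multiplication(1-Y_\hbar)\circ\multiplication^{\mathrm{lift}}(\psi)\circ\Pi^{(k)},
\]
where one uses $\mathrm{Im}(\multiplication^{\mathrm{lift}}(\psi))\subset\mathrm{Im}(\BargmannP_x)$ to remove one interior $\BargmannP_x$. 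Lemma~\ref{lm:lift_of_multiplication_operator} (in its $\mathcal{W}^{r,\pm}_\hbar$ version) replaces $\multiplication^{\mathrm{lift}}(\psi)$ by $\multiplication(\psi\circ\pi)\circ\BargmannP_x$ with an error of norm $\mathcal{O}(\hbar^\theta)$; combined with $\BargmannP_x\circ\Pi^{(k)}=\Pi^{(k)}$ (the image of $\Pi^{(k)}$ lies in $\mathrm{Im}(\BargmannP_x)$ by (\ref{eq:relation_between_Bargmann_Projectors}) and the definition (\ref{eq:def_t}) of $\romet^{(k)}_\hbar$), and boundedness of $\BargmannP_x$ on $L^2((\mathcal{W}^{r,\pm}_\hbar)^2)$ (Corollary~\ref{cor:BargmannP_bdd}), the problem reduces to estimating the operator norm of $\multiplication\bigl((1-Y_\hbar)(\psi\circ\pi)\bigr)\circ\Pi^{(k)}$ from $L^2((\mathcal{W}^{r,-}_\hbar)^2)$ to $L^2((\mathcal{W}^{r,+}_\hbar)^2)$.

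The key geometric input is a support analysis for the new multiplier. The product $(1-Y_\hbar)(x,\xi)\,\psi(x)$ vanishes unless $|x|\le C_*\hbar^{1/2-\theta}$ (condition (C1)) \emph{and} $|(x,\xi)|\ge\hbar^{1/2-2\theta}$ (definition of $Y_\hbar$), which forces $|\xi|\ge\tfrac12\hbar^{1/2-2\theta}$ for small $\hbar$. The normal-coordinate relations $\zeta_p=\xi_p+q/2$, $\zeta_q=\xi_q-p/2$ from (\ref{eq:def_zeta_j-1-1}) then give $|\zeta|\ge c\,\hbar^{1/2-2\theta}$ on the same support, and hence
\[
\langle\hbar^{-1/2}|\zeta|\rangle^{-1}\le C\hbar^{2\theta}\qquad\text{wherever }(1-Y_\hbar)\,\psi\circ\pi\text{ is nonzero.}
\]

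This is fed into the kernel estimate for $\Pi^{(k)}$ established in the proof of Lemma~\ref{cor:XT_exchange}: for arbitrarily large $\nu$,
\[
\frac{\mathcal{W}^{r,+}_\hbar(x,\xi)}{\mathcal{W}^{r,-}_\hbar(x',\xi')}\,|K(x,\xi;x',\xi')|\le C_\nu\,\langle\hbar^{-1/2}|(\nu_q,\nu_p)-(\nu'_q,\nu'_p)|\rangle^{-\nu}\langle\hbar^{-1/2}|\zeta|\rangle^{-(r-k)}\langle\hbar^{-1/2}|\zeta'|\rangle^{-(r-k)}.
\]
Multiplying by $(1-Y_\hbar)(x,\xi)\psi(x)$ and extracting one factor $\langle\hbar^{-1/2}|\zeta|\rangle^{-1}\le C\hbar^{2\theta}$ from the support analysis leaves a kernel whose remaining decay $\langle\hbar^{-1/2}|\zeta|\rangle^{-(r-k-1)}\langle\hbar^{-1/2}|\zeta'|\rangle^{-(r-k)}\langle\cdot\rangle^{-\nu}$ is integrable in both arguments, using $r-k-1>4d+1$ from (\ref{eq:choice_of_r_3}) and $\nu$ large. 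Schur's inequality, in the form of Lemma~\ref{lm:boundedness_of_molifier2}, then produces an operator norm bound of $\mathcal{O}(\hbar^{2\theta})$, which, together with the earlier $\mathcal{O}(\hbar^\theta)$ approximation error, establishes the first inequality.

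The second inequality is handled by an entirely symmetric argument: after the same lifting and approximation, we are reduced to $\Pi^{(k)}\circ\multiplication\bigl((1-Y_\hbar)(\psi\circ\pi)\bigr)$, whose kernel carries the support constraint on the \emph{source} variable $(x',\xi')$, forcing $|\zeta'|\gtrsim\hbar^{1/2-2\theta}$, so one extracts $\langle\hbar^{-1/2}|\zeta'|\rangle^{-1}\le C\hbar^{2\theta}$ from the right factor of the kernel bound instead. The principal obstacle, and essentially the only place one must be careful, is the bookkeeping of the various Bargmann projectors and verifying that the weights $\mathcal{W}^{r,\pm}_\hbar$ transform compatibly under the symplectic rotation $\Phi:(x,\xi)\leftrightarrow(\nu,\zeta)$; this is automatic because by definition~(\ref{eq:def_W_W+}) these weights depend on $(x,\xi)$ only through $\zeta$.
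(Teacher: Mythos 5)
Your proof is correct and takes essentially the same approach as the paper: lift to the Bargmann side, exploit the scale mismatch between $\supp\psi$ (radius $\sim\hbar^{1/2-\theta}$) and $\supp(1-Y_\hbar)$ (radius $\gtrsim\hbar^{1/2-2\theta}$) to extract a factor $\hbar^{2\theta}$ from $\langle\hbar^{-1/2}|\zeta|\rangle^{-1}$ on the relevant support, and close with Schur's inequality on the weighted kernel of $\romet_\hbar^{(k)}$. The only reorganization is that you first replace $\multiplication^{\mathrm{lift}}(\psi)$ by $\multiplication(\psi\circ\pi)$ via Lemma~\ref{lm:lift_of_multiplication_operator} before estimating, whereas the paper estimates the kernel of the full composition $\BargmannP_x\circ(\mathrm{Id}-\multiplication(Y_\hbar))\circ\BargmannP_x\circ\multiplication^{\mathrm{lift}}(\psi)\circ\Pi^{(k)}$ directly; this is cleaner bookkeeping but identical in substance.
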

\begin{proof}
For the proof of the first inequality, it suffices to show the estimate
\[
\|A:L^{2}(\real^{2d}\oplus\real^{2d},(\mathcal{W}_{\hbar}^{r,-})^{2})\to L^{2}(\real^{2d}\oplus\real^{2d},(\mathcal{W}_{\hbar}^{r,+})^{2})\|<C\hbar^{\theta}
\]
for the operator 
\[
A:=\BargmannP_{x}\circ(\mathrm{Id}-\multiplication(Y_{\hbar}))\circ\BargmannP_{x}\circ\multiplication^{\mathrm{lift}}(\psi)\circ\Phi^{*}\circ(\BargmannP_{\nu_{q}}\otimes\mathcal{T}_{\hbar}^{(k)})\circ(\Phi^{*})^{-1}.
\]
Recall that we already have the estimates (\ref{eq:Bargman_Kernel})
and (\ref{eq:kernel_of_pik}) respectively for the kernel of the operator
$\BargmannP_{x}$ and $\Phi^{*}\circ(\BargmannP_{\nu_{q}}\otimes\mathcal{T}_{\hbar}^{(k)})\circ(\Phi^{*})^{-1}$.
Using those estimates with the property (\ref{eq:estmate_on_Wr})
of the escape function $\mathcal{W}_{\hbar}^{r,\pm}$ and noting that
\[
|(x,\xi)|\ge\hbar^{1/2-2\theta}\quad\mbox{for }(x,\xi)\in\supp\left(\mathrm{Id}-Y_{\hbar}\right)\qquad\mbox{(resp.}\quad|x|\le2\hbar^{1/2-\theta}\quad\mbox{for }x\in\supp\psi),
\]
we can estimate the kernel of the operator $\multiplication(\mathcal{W}_{\hbar}^{r,+})\circ A\circ\multiplication(\mathcal{W}_{\hbar}^{r,-})^{-1}$
in absolute value and obtain the required estimate. The second inequality
can be proved in the parallel manner. We omit the tedious details. 
\end{proof}

\subsection{Prequantum transfer operators for non-linear transformations close
to the identity }

\label{ss:nonlinear} In this subsection, we study the Euclidean prequantum
transfer operators for diffeomorphisms defined on small open subsets
on $\real^{2d}$ and close to the identity map. Roughly we show that
the action of those prequantum operators are close to the identity
as an operator on $\mathcal{H}_{\hbar}^{r}(\real^{2d})$, though this
is \emph{not} true in the literal sense. 

In this subsection, we consider the following setting in addition
to Setting I page \pageref{Setting-I}: 

\bigskip{}

\noindent%
\framebox{\begin{minipage}[t]{1\columnwidth}%
\noindent \textbf{Setting II:} \textit{~For every $\hbar>0$, there
exists a given set $\scrG_{\hbar}$ of $C^{\infty}$ diffeomorphisms
\[
g:\mathbb{D}(\hbar^{1/2-2\theta})\to g(\mathbb{D}(\hbar^{1/2-2\theta}))\subset\real^{2d}
\]
such that every $g\in\scrG_{\hbar}$ satisfies }
\begin{description}
\item [{\textit{(G1)}}] \textit{$g$ is symplectic with respect to the
symplectic form $\omega$ in (\ref{eq:Symplectic_form_on_Euclidean_space}),}
\item [{\textit{(G2)}}] \textit{$g(0)=0$ and $\|Dg(0)-\mathrm{Id}\|<C\hbar^{\beta(1/2-\theta)}$,
and}
\item [{\textit{(G3)}}] \textit{$\left|\partial^{\alpha}g\right|<C_{\alpha}$
for any multiindices $\alpha$.}
\end{description}
\textit{where $C$ and $C_{\alpha}$ are positive constants that does
not depend on $\hbar$ nor $g\in\scrG_{\hbar}$.}%
\end{minipage}}

\bigskip{}

\begin{rem}
In the next section we will consider a few different sets of diffeomorphisms
as $\mathscr{G}_{\hbar}$ and apply the argument below. At this moment,
the meaning of the bound $C\hbar^{\beta(1/2-\theta)}$ in the condition
(G2) may not be clear. This is a consequence of the fact that the
hyperbolic splitting (\ref{eq:foliation}) is $\beta$-Hölder continuous.
The reason will become clear when we introduce a family of local coordinates
on $M$ in the beginning of the next section. 
\end{rem}
For $g\in\scrG_{\hbar}$, we consider the Euclidean prequantum transfer
operator $\prequantumL_{g}$ defined in Subsection \ref{sub: Euclidean_prequantum_and_Laplacian_operator}.
Recall from Proposition \ref{prop:expression_of_prequantum_op_on_Rn-1}
that this operator is of the form 
\begin{equation}
\prequantumL_{g}:C_{0}^{\infty}(\mathbb{D}(\hbar^{1/2-2\theta}))\to C_{0}^{\infty}(g(\mathbb{D}(\hbar^{1/2-2\theta}))),\quad\prequantumL_{g}\, u(x)=e^{-(i/\hbar)\cdot\mathcal{A}_{g}(g^{-1}\left(x\right))}\cdot u(g^{-1}(x))\label{eq:Lg}
\end{equation}
 with 
\begin{equation}
\mathcal{A}_{g}(x)=\int_{\gamma}g^{*}\eta-\eta\label{eq:A_g}
\end{equation}
where $\gamma$ is a path from the origin $0$ to $x$ and $\eta$
is given in (\ref{eq:eta_3-1}). For convenience, we take the origin
$0$ as a fixed point of reference. We first show the following lemma
for the (action) function $\mathcal{A}_{g}(x)$.
\begin{lem}
\label{lm:g} If $g\in\scrG_{\hbar}$, we have 
\[
|\partial^{\alpha}\mathcal{A}_{g}(x)|\le C_{\alpha}\cdot\min\{|x|^{3-|\alpha|},1\}
\]
 for any multi-index $\alpha$ with $|\alpha|>0$, where $C_{\alpha}>0$
is a constant independent of $\hbar$.\end{lem}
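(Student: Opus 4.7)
The plan is to observe two cancellations and then close by a standard Taylor estimate. First, since $g$ preserves the symplectic form $\omega$, the one-form
\[
\sigma := g^{*}\eta - \eta
\]
is closed: $d\sigma = g^{*}(d\eta) - d\eta = g^{*}\omega - \omega = 0$. Hence $\sigma$ admits a primitive on the simply connected ball $\mathbb{D}(\hbar^{1/2-2\theta})$, and~(\ref{eq:A_g}) exhibits $\mathcal{A}_g$ as exactly this primitive with $\mathcal{A}_g(0)=0$; in particular $d\mathcal{A}_g = \sigma$. Therefore, for $|\alpha|\ge 1$, bounding $\partial^{\alpha}\mathcal{A}_g$ reduces to bounding $\partial^{\alpha-e_i}\sigma_i$ for some coordinate~$i$, i.e.\ to bounding derivatives of order $|\alpha|-1$ of the components of~$\sigma$.

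Next I would use the explicit identity $\eta_x(v) = \tfrac{1}{2}\omega(x,v)$, which is immediate from~(\ref{eq:eta_3-1}) and~(\ref{eq:Symplectic_form_on_Euclidean_space}). This gives
\[
\sigma_x(v) = \tfrac{1}{2}\bigl[\omega(g(x), Dg_x(v)) - \omega(x,v)\bigr].
\]
Set $L := Dg(0)$ and $R(x) := g(x) - Lx$, so that $R(0)=0$ and $DR_0 = 0$. The key cancellation is that $L$ is a symplectic linear map (as the differential at a fixed point of the symplectic diffeomorphism $g$), so $\omega(Lu, Lv) = \omega(u,v)$. Substituting this and expanding by bilinearity,
\[
\sigma_x(v) = \tfrac{1}{2}\bigl[\omega(Lx, DR_x(v)) + \omega(R(x), Lv) + \omega(R(x), DR_x(v))\bigr],
\]
and each of the three terms is $O(|x|^{2})$ because $|R(x)| \lesssim |x|^{2}$ and $|DR_x| \lesssim |x|$, while $|L|$ is uniformly bounded by~(G3) with $|\alpha|=1$.

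Finally, (G3) yields $|\partial^{\gamma}R(x)| \le C_{\gamma}$ uniformly in $g\in\scrG_\hbar$ and $\hbar$, together with the Taylor refinements $|\partial^{\gamma}R(x)| \le C_{\gamma}|x|^{\max(0,2-|\gamma|)}$ for $|\gamma|\le 2$. Applying the Leibniz rule to the displayed expression for $\sigma_x$, each differentiation of order $k$ loses at most one power of $|x|$ until the exponent hits zero; hence $|\partial^{\gamma}\sigma_x| \le C_{\gamma}\min\{|x|^{2-|\gamma|}, 1\}$ uniformly. Taking $\gamma = \alpha - e_i$ with $|\alpha|\ge 1$ gives the desired bound $|\partial^{\alpha}\mathcal{A}_g(x)| \le C_\alpha \min\{|x|^{3-|\alpha|}, 1\}$.

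The only nontrivial step is recognizing the two cancellations---closedness of $\sigma$ (which produces the primitive $\mathcal{A}_g$) and symplecticity of the linear part $L$ (which forces $\sigma$ to vanish to second order at the origin, hence $\mathcal{A}_g$ to third)---after which the estimate is purely computational. Notably, condition~(G2) is not used in this lemma; only the uniform $C^\infty$ bounds from~(G3) enter, which is why the constants $C_\alpha$ are uniform in $g\in\scrG_\hbar$ and~$\hbar$.
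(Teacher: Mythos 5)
Your proof is correct and takes a somewhat cleaner route than the paper's. Where the paper verifies that the second derivatives of $\mathcal{A}_g$ vanish at the origin by writing out the one-form in coordinates and invoking the symplectic relations of $g^{-1}$ termwise, you isolate the linear part $L=Dg(0)$ and exploit $\omega(Lu,Lv)=\omega(u,v)$ so that the Taylor remainder $R=g-L$ carries the entire expression, after which $\sigma(0)=0$ and $D\sigma(0)=0$ follow automatically from $R(0)=0$, $DR_0=0$; both approaches then close with the uniform $C^{\infty}$ bounds from (G3). The one thing to correct is your final remark that condition (G2) is not used: the decomposition $R=g-L$ satisfies $R(0)=0$ only because $g(0)=0$, which is exactly the first clause of (G2), and without it $\sigma_0(v)=\tfrac{1}{2}\omega(g(0),Dg_0(v))$ need not vanish, so $\mathcal{A}_g$ would fail to be $O(|x|^3)$. (The paper uses (G2) in precisely this way too.) What you have correctly observed is only that the quantitative smallness $\|Dg(0)-\mathrm{Id}\|\lesssim\hbar^{\beta(1/2-\theta)}$ in (G2) plays no role in this particular lemma.
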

\begin{proof}
From the definition, we have 
\[
|\partial^{\alpha}\mathcal{A}_{g}(x)|\le C{}_{\alpha}
\]
for any multi-index $\alpha$. Hence the conclusion holds obviously
in the case $|\alpha|\ge3$. The first derivatives of $\mathcal{A}_{g}$
at $0$ vanishes from the assumption $g(0)=0$ in the condition (G2).
Let us show that the second derivatives of $\mathcal{A}_{g}$ also
vanish. For this we use the coordinate $x=\left(p,q\right)$ in (\ref{eq:Coordinate_p_q})
and the notation
\[
g^{-1}\left(p,q\right)=\left(g_{p}\left(p,q\right),g_{q}\left(p,q\right)\right),
\]
Note that we have $g_{p}\left(0,0\right)=g_{q}\left(0,0\right)=0$
from condition (G2). Condition (G1) that $g^{-1}$ is symplectic i.e.
$(g^{-1})^{*}\omega=\omega$ writes
\[
\frac{\partial g_{q}}{\partial q_{i}}\cdot\frac{\partial g_{p}}{\partial q_{j}}-\frac{\partial g_{q}}{\partial q_{j}}\cdot\frac{\partial g_{p}}{\partial q_{i}}=0,\quad\frac{\partial g_{q}}{\partial p_{i}}\cdot\frac{\partial g_{p}}{\partial p_{j}}-\frac{\partial g_{q}}{\partial p_{j}}\cdot\frac{\partial g_{p}}{\partial p_{i}}=0,\quad\frac{\partial g_{q}}{\partial q_{j}}\cdot\frac{\partial g_{p}}{\partial p_{i}}-\frac{\partial g_{q}}{\partial p_{i}}\cdot\frac{\partial g_{p}}{\partial q_{j}}=\begin{cases}
1 & (i=j);\\
0 & (i\neq j).
\end{cases}
\]
Then we have 
\[
\left(g^{-1}\right)^{*}\eta-\eta=\frac{1}{2}\sum_{i=1}^{d}\left(g_{q}\cdot\frac{\partial g_{p}}{\partial p_{i}}-g_{p}\cdot\frac{\partial g_{q}}{\partial p_{i}}-q_{i}\right)dp_{i}+\frac{1}{2}\sum_{i=1}^{d}\left(g_{q}\cdot\frac{\partial g_{p}}{\partial q_{i}}-g_{p}\cdot\frac{\partial g_{q}}{\partial q_{i}}+p_{i}\right)dq_{i},
\]
and we check that all of the first order partial derivatives of the
coefficients of $dp_{i}$ and $dq_{i}$ of this one form vanish at
the origin $0\in\real^{2d}$. This implies that the second derivatives
of $\mathcal{A}_{g}$ vanishes at the origin. The claim of the lemma
for the case $|\alpha|\le2$ then follows immediately. 
\end{proof}
In the next section, we consider the action of the operator $\mathcal{L}_{g}$
on functions supported on $\mathbb{D}(\hbar^{1/2-\theta})$ (or sometimes
on $\mathbb{D}(2\hbar^{1/2-\theta})$). For this reason, we take the
$C^{\infty}$function 
\begin{equation}
\chi_{\hbar}:\real^{2d}\to[0,1],\quad\chi_{\hbar}(x)=\chi(\hbar^{-1/2+\theta}x/2)\label{eq:def_chi_hbar}
\end{equation}
with letting $\chi$ be a $C^{\infty}$ function satisfying (\ref{eq:def_chi})
and consider the operator 
\[
\prequantumL_{g}\circ\multiplication(\chi_{\hbar}):C^{\infty}(\real{}^{2d})\to C_{0}^{\infty}(\real^{2d})
\]
instead of the operator $\mathcal{L}_{g}$ itself. The next lemma
is the main ingredient of this subsection, which tells roughly that
the operator $\mathcal{L}_{g}$ for $g\in\mathscr{G}_{\hbar}$ is
close to the identity, under the effect of truncation by the operator
$\mathcal{Y}_{\hbar}$.
\begin{prop}
\label{lm:L_g_Y_almost_identity} There exist constants $C>0$ and
$\epsilon>0$ such that, for any $\hbar>0$ and $g\in\scrG_{\hbar}$,
we have
\[
\|\mathcal{Y}_{\hbar}\circ(\prequantumL_{g}-\mathrm{Id})\circ\multiplication(\chi_{\hbar})\|_{\mathcal{H}_{\hbar}^{r}(\real^{2d})}<C\hbar^{\epsilon}\quad\mbox{and}\quad\left\Vert (\prequantumL_{g}-\mathrm{Id})\circ\mathcal{Y}_{\hbar}\circ\multiplication(\chi_{\hbar})\right\Vert _{\mathcal{H}_{\hbar}^{r}(\real^{2d})}<C\hbar^{\epsilon}.
\]
\end{prop}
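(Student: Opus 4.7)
The plan is to bound the operator norm by lifting everything to the phase space via the modified Bargmann transform $\mathcal{B}_x$ and estimating the resulting Schwartz kernel through Schur's lemma (Lemma \ref{lm:boundedness_of_molifier2}). Throughout I exploit the fact that on the regions singled out by the cutoffs $Y_\hbar$ and $\chi_\hbar$, the map $g$ is quantitatively close to the identity and the phase $\mathcal{A}_g/\hbar$ is pointwise small.

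First I would compute the kernel of the lifted operator
\[
\widetilde{T}_g:=\mathcal{B}_x\circ\mathcal{Y}_\hbar\circ(\mathcal{L}_g-\mathrm{Id})\circ\multiplication(\chi_\hbar)\circ\mathcal{B}_x^{*}.
\]
Using the identities $\mathcal{B}_x\mathcal{Y}_\hbar\mathcal{B}_x^{*}=\mathcal{P}_x\circ\multiplication(Y_\hbar)\circ\mathcal{P}_x$ and Lemma \ref{lm:lift_of_multiplication_operator} to exchange $\mathcal{B}_x\multiplication(\chi_\hbar)\mathcal{B}_x^{*}$ with $\mathcal{P}_x\multiplication(\chi_\hbar\circ\pi)$ up to an error $O(\hbar^\theta)$, together with (\ref{eq:Lg}), one gets a kernel representative of the form
\[
K(x_1,\xi_1;x_2,\xi_2)=Y_\hbar(x_1,\xi_1)\cdot\bigl[I_g(x_1,\xi_1;x_2,\xi_2)-I_{\mathrm{Id}}(x_1,\xi_1;x_2,\xi_2)\bigr]\cdot\chi_\hbar(x_2),
\]
where
\[
I_g(x_1,\xi_1;x_2,\xi_2)=\int\overline{\phi_{x_1,\xi_1}(g(z))}\,e^{-(i/\hbar)\mathcal{A}_g(z)}\,\phi_{x_2,\xi_2}(z)\,dz
\]
(after the symplectic change of variables $z=g^{-1}(y)$, which has unit Jacobian) and $I_{\mathrm{Id}}$ is the same integral with $g$ replaced by the identity.

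Second, on the support of $Y_\hbar(x_1,\xi_1)\chi_\hbar(x_2)$ one has $|(x_1,\xi_1)|\le 2\hbar^{1/2-2\theta}$ and $|x_2|\le 2\hbar^{1/2-\theta}$, and the dominant contribution to the $z$-integral comes from $z$ such that both $|z-x_2|$ and $|g(z)-x_1|$ are $\lesssim\hbar^{1/2}\langle\cdot\rangle$. Using (G2) I would Taylor expand
\[
g(z)=z+\bigl(Dg(0)-\mathrm{Id}\bigr)z+O(|z|^2)=z+O\!\bigl(\hbar^{\beta(1/2-\theta)}|z|+|z|^2\bigr),
\]
so that on the relevant support $|g(z)-z|\le C\hbar^{1/2+\delta_1}$ for some $\delta_1>0$ (this is where the condition $\theta<\beta/8$ enters, as it guarantees $\delta_1>0$). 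Lemma \ref{lm:g} likewise gives $|\mathcal{A}_g(z)|\le C|z|^3\le C\hbar^{3(1/2-\theta)}$, hence $|\mathcal{A}_g(z)|/\hbar\le C\hbar^{1/2-3\theta}=O(\hbar^{\delta_2})$ with $\delta_2>0$ (using $\theta<1/8<1/6$). Writing $e^{-i\mathcal{A}_g(z)/\hbar}=1+O(\hbar^{\delta_2})$ and $\phi_{x_1,\xi_1}(g(z))-\phi_{x_1,\xi_1}(z)=O(\hbar^{-1/2}|g(z)-z|)\cdot\phi_{x_1,\xi_1}(z)=O(\hbar^{\delta_1})\cdot\phi_{x_1,\xi_1}(z)$ (the $\hbar^{-1/2}$ coming from differentiation of the coherent state is absorbed), the integrand of $I_g-I_{\mathrm{Id}}$ is uniformly $O(\hbar^{\epsilon'})$ times the Gaussian factors, with $\epsilon':=\min(\delta_1,\delta_2)>0$.

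Third, to obtain sufficient off-diagonal decay, I would perform repeated integration by parts in $z$, exploiting the oscillation $\exp\bigl(\tfrac{i}{\hbar}[\xi_2\cdot(z-x_2/2)-\xi_1\cdot(g(z)-x_1/2)]\bigr)$ against the Gaussian envelope, exactly as in the standard nonstationary-phase arguments for the Bargmann projector (cf.\ the proof of Lemma \ref{lm:lift_of_multiplication_operator}). This yields
\[
\Bigl|\tfrac{\mathcal{W}_\hbar^r(x_1,\xi_1)}{\mathcal{W}_\hbar^r(x_2,\xi_2)}\,K(x_1,\xi_1;x_2,\xi_2)\Bigr|\le C\,\hbar^{\epsilon}\langle\hbar^{-1/2}|(x_1,\xi_1)-(x_2,\xi_2)|\rangle^{-\nu}
\]
for arbitrarily large $\nu$. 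Applying Lemma \ref{lm:boundedness_of_molifier2} with $\nu>2r+4d$ then gives the desired bound on the lifted operator, hence on $\|\mathcal{Y}_\hbar\circ(\mathcal{L}_g-\mathrm{Id})\circ\multiplication(\chi_\hbar)\|_{\mathcal{H}_\hbar^r}$. The second estimate follows by the symmetric argument, noting that now the cutoff $\chi_\hbar$ localizes before $\mathcal{Y}_\hbar$, so the support constraints are preserved via Lemma \ref{YT_exchange}-type reasoning.

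The main obstacle I expect is Step~3: because the weight $\mathcal{W}_\hbar^r$ has only critical regularity (\ref{eq:W^r_in_S_1-epsilon}) and the H\"older exponent $\beta$ enters through (G2), all the small factors $\hbar^{\beta(1/2-\theta)}$, $\hbar^{1/2-3\theta}$ and $\hbar^\theta$ must be carefully combined so that one retains a strictly positive final exponent $\epsilon$, while simultaneously producing enough Gaussian/polynomial decay in $|(x_1,\xi_1)-(x_2,\xi_2)|$ to absorb the loss $\langle\cdot\rangle^{2r}$ coming from (\ref{eq:estmate_on_Wr}). Balancing these three scales---the wave packet width $\hbar^{1/2}$, the real-space cutoff scale $\hbar^{1/2-\theta}$, and the phase-space cutoff scale $\hbar^{1/2-2\theta}$---is where (\ref{eq:cond_theta}) is used crucially.
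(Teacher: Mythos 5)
Your proposal is correct and follows essentially the same strategy as the paper: reduce via Corollary \ref{cor:BargmannP_bdd} to a kernel estimate on the lifted operator, exploit the combined localization from $Y_\hbar$ (giving $|(x,\xi)|\lesssim\hbar^{1/2-2\theta}$, hence in particular $|\xi|\lesssim\hbar^{1/2-2\theta}$) and $\chi_\hbar$ (giving $|y|\lesssim\hbar^{1/2-\theta}$) together with the estimates from Lemma \ref{lm:g} and condition (G2) to see that the integrand of the difference kernel carries a small factor $\hbar^{\epsilon'}$, obtain off-diagonal decay by repeated integration by parts against the $\xi$-oscillation, and close with the Schur-type Lemma \ref{lm:boundedness_of_molifier2}. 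Your change of variables $z=g^{-1}(y)$ (unit Jacobian since $g$ is symplectic) produces the same one-parameter family as the paper's function $k(x,\xi,x',\xi',y)=e^{(i/\hbar)\mathcal{A}_g(g(y))-(i/\hbar)\xi(g(y)-y)-(|g(y)-x|^2-|y-x|^2)/(2\hbar)}-1$, which the paper estimates directly; the intermediate invocation of Lemma \ref{lm:lift_of_multiplication_operator} to replace $\mathcal{B}_x\multiplication(\chi_\hbar)\mathcal{B}_x^*$ by $\mathcal{P}_x\multiplication(\chi_\hbar\circ\pi)$ is harmless but not needed, since the kernel of $\mathcal{B}_x\circ(\mathcal{L}_g-\mathrm{Id})\circ\multiplication(\chi_\hbar)\circ\mathcal{B}_x^*$ can be written down explicitly and multiplied on the left by $Y_\hbar(x,\xi)$ directly. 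One point to tighten in your Step 2: the factor picked up by differentiating the coherent state is of order $\hbar^{-1}|\xi_1|\sim\hbar^{-1/2-2\theta}$ (from the phase), not just $\hbar^{-1/2}$, so the positivity of the final exponent $\epsilon$ requires precisely $\beta/2 - 3\theta - \beta\theta > 0$, which is what (\ref{eq:cond_theta}) delivers.
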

\begin{proof}
The proof below is elementary but a little demanding. We will use
the following estimate which follows from Lemma \ref{lm:g} and the
conditions in Setting II on $\scrG_{\hbar}$: For any $x\in\real^{2d}$
with $|x|\le\hbar^{1/2-\theta}$, it holds 
\begin{align*}
 & |\mathcal{A}_{g}(x)-\mathcal{A}_{g}(0)|\le C|x|^{3}<C\hbar^{3(1/2-\theta)}, &  & \|D\mathcal{A}_{g}(x)\|\le C|x|^{2}<C\hbar^{2(1/2-\theta)},\\
 & \|Dg(x)-\mathrm{Id}\|\le C\hbar^{\beta(1/2-\theta)}\quad\mbox{and }\quad &  & |g(x)-x|\le C|x|^{1+\beta}<C\hbar^{(1+\beta)(1/2-\theta)}
\end{align*}
with $C$ a constant independent of $\hbar>0$ and $g\in\mathcal{G}_{\hbar}$.
Also we note that, if $(x,\xi)\in\supp Y_{\hbar}$, we have $|(x,\xi)|\le2\hbar^{1/2-2\theta}$
and, in particular, $|\xi|\le2\hbar^{1/2-2\theta}$.

From Corollary \ref{cor:BargmannP_bdd}, the first claim follows if
we show 
\begin{equation}
\left\Vert \multiplication(Y_{\hbar})\circ\Bargmann_{x}\circ(\prequantumL_{g}-\mathrm{Id})\circ\multiplication(\chi_{\hbar})\circ\Bargmann_{x}^{*}\right\Vert _{L^{2}(\real^{2d}\oplus\real^{2d},(\mathcal{W}_{\hbar}^{r})^{2})}\le C\hbar^{\epsilon}.\label{eq:L_g_intermidiate_estimate}
\end{equation}
Recalling the definition of the operators $\Bargmann_{x}$ and $\Bargmann_{x}^{*}$
in (\ref{eq:Bargmann_modified}), we write the operator $\Bargmann_{x}\circ(\prequantumL_{g}-\mathrm{Id})\circ\multiplication(\chi_{\hbar})\circ\Bargmann_{x}^{*}$
as an integral operator of the form 
\[
(\Bargmann_{x}\circ(\prequantumL_{g}-\mathrm{Id})\circ\multiplication(\chi_{\hbar})\circ\Bargmann_{x}^{*}u)(x,\xi)=\int K(2^{-1/2}x,2^{1/2}\xi;2^{-1/2}x',2^{1/2}\xi')u(x',\xi')\frac{dx'd\xi'}{(2\pi\hbar)^{2d}},
\]
where 
\begin{align*}
K(x, & \xi;x',\xi')=a_{D}^{2}\int e^{(i/\hbar)\xi((x/2)-y)+(i/\hbar)\xi'(y-(x'/2))}\cdot e^{-|y-x|^{2}/(2\hbar)-|y-x'|^{2}/(2\hbar)}\\
 & \qquad\qquad\qquad\cdot\chi_{\hbar}(y)\cdot k(x,\xi,x',\xi',y)\, dy
\end{align*}
and 
\[
k(x,\xi,x',\xi',y)=e^{(i/\hbar)\mathcal{A}_{g}(g(y))-(i/\hbar)\xi(g(y)-y)-(|g(y)-x|^{2}-|y-x|^{2})/(2\hbar)}-1.
\]
(The factor $2^{\pm1}$ appears because of the change of variable
$\tilde{\sigma}$ in the definition of $\Bargmann_{x}$ and $\Bargmann_{x^{*}}$.
But this is not important in any sense.) Applying integration by parts
to the integral above for $\nu$ times, we see
\begin{align*}
 & K(x,\xi;x',\xi')\\
 & =a_{D}^{2}\int L^{\nu}(e^{(i/\hbar)\xi((x/2)-y)+(i/\hbar)\xi'(y-(x'/2))})\cdot e^{-|y-x|^{2}/(2\hbar)-|y-x'|^{2}/(2\hbar)}\chi_{\hbar}(y)k(x,\xi,x',\xi',y)\, dy\\
 & =a_{D}^{2}\int e^{(i/\hbar)\xi((x/2)-y)+(i/\hbar)\xi'(y-(x'/2))}\cdot(^{t}L)^{\nu}(e^{-|y-x|^{2}/(2\hbar)-|y-x'|^{2}/(2\hbar)}\chi_{\hbar}(y)k(x,\xi,x',\xi',y))\, dy
\end{align*}
where $L$ is the differential operators defined by
\[
Lu=\frac{1}{1+\hbar^{-1}|\xi-\xi'|^{2}}\cdot\left(1+i\sum_{j=1}^{2d}(\xi_{j}-\xi'_{j})\frac{\partial}{\partial y_{j}}\right)u
\]
and $^{t}L$ is its transpose: 

\[
^{t}Lu=\left(1-i\sum_{j=1}^{2d}(\xi_{j}-\xi'_{j})\frac{\partial}{\partial y_{j}}\right)\left(\frac{1}{1+\hbar^{-1}|\xi-\xi'|^{2}}\cdot u\right).
\]
Using the estimates noted in the beginning in the resulting terms,
we can get the estimate 
\begin{equation}
|K(x,\xi;x',\xi')|\le C_{\nu}\cdot\hbar^{\epsilon}\cdot\langle\hbar^{-1/2}|(x,\xi)-(x',\xi')|\rangle^{-\nu}\quad\mbox{ for \ensuremath{(x,\xi)\in\supp\mathcal{Y}_{\hbar}}}\label{eq:kernel_estimate_nonlinear}
\end{equation}
for a small constant $\epsilon>0$ and arbitrarily large $\nu>0$,
where $C_{\nu}$ is a constant independent of $\hbar$. 
\begin{rem}
The result of integration by part is not very simple. But we have
only to consider the order w.r.t. the parameter $\hbar$, since we
allow the constant $C_{\nu}$ to depend on the derivatives of $g$.
Hence it is not too difficult to do. Just note that $\theta$ satisfies
the condition (\ref{eq:cond_theta}). 
\end{rem}
This estimate for sufficiently large $\nu$ and (\ref{eq:estmate_on_Wr})
yields the required estimate. The second claim is proved in the parallel
manner. 
\end{proof}
As we noted in the beginning of this section, the operator $\prequantumL_{g}\circ\multiplication(\chi_{\hbar})$
may not extends to a bounded operator from $\mathcal{H}_{\hbar}^{r}(\real^{2d})$
to itself, even though $g\in\mathcal{G}_{\hbar}$ is very close to
the identity map. The next proposition (and hyperbolicity of $f$)
will compensate this inconvenience. 
\begin{prop}
\label{pp:bdd_g}For any $g\in\scrG_{\hbar}$, we have 
\begin{equation}
\left\Vert \prequantumL_{g}\circ\multiplication(\chi_{\hbar})\right\Vert _{\mathcal{H}_{\hbar}^{r,+}(\real^{2d})\to\mathcal{H}_{\hbar}^{r}(\real^{2d})}\le C_{0}\quad\mbox{and }\quad\left\Vert \prequantumL_{g}\circ\multiplication(\chi_{\hbar})\right\Vert _{\mathcal{H}_{\hbar}^{r}(\real^{2d})\to\mathcal{H}_{\hbar}^{r,-}(\real^{2d})}\le C_{0}\label{eq:inequality}
\end{equation}
for sufficiently small $\hbar>0$, where $C_{0}>1$ is a constant
that depends only on $n$, $r$, $d$, $\theta$ and the choice of
the escape functions $W$ and $W^{\pm}$ in subsection \ref{ss:wL2}.
(In particular, $C_{0}$ is independent of the choice of the family
$\mathscr{G}_{\hbar}$.)
\end{prop}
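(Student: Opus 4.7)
I will establish both inequalities simultaneously by a common kernel estimate, working on the Bargmann side. Since $\mathcal{W}_\hbar^{r,-}\le\mathcal{W}_\hbar^r\le\mathcal{W}_\hbar^{r,+}$, the two stated bounds are of the same nature: we need the operator $\prequantumL_g\circ\multiplication(\chi_\hbar)$ to be bounded when we either \emph{enlarge} the weight on the input side (from $\mathcal{W}_\hbar^r$ to $\mathcal{W}_\hbar^{r,+}$) or \emph{shrink} the weight on the output side (from $\mathcal{W}_\hbar^r$ to $\mathcal{W}_\hbar^{r,-}$). The reason both are true is that the canonical map $G$ of $\prequantumL_g$ (which is the lift of $(Dg^*)^{-1}$ to the coordinates $(\nu,\zeta)$ introduced in Proposition \ref{prop:Normal-coordinates.}) differs from the identity by $O(\hbar^{\beta(1/2-\theta)})$ on the relevant region, so it sends the narrow cones in $W^r$ strictly inside the wider cones defining $W^{r,\pm}$ (recall that $W^r$ uses cones $\mathbf{C}_\pm(1/2)$ while $W^{r,+}$, $W^{r,-}$ use the wider $\mathbf{C}_\pm(3)$ and narrower $\mathbf{C}_\pm(1/9)$).

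Concretely, I will split the operator on the output side,
\begin{equation*}
\prequantumL_g\circ\multiplication(\chi_\hbar)=\mathcal{Y}_\hbar\circ\prequantumL_g\circ\multiplication(\chi_\hbar)+(\mathrm{Id}-\mathcal{Y}_\hbar)\circ\prequantumL_g\circ\multiplication(\chi_\hbar),
\end{equation*}
and treat the two terms separately. For the first term I will apply Proposition \ref{lm:L_g_Y_almost_identity}: writing $\mathcal{Y}_\hbar\prequantumL_g\chi_\hbar=\mathcal{Y}_\hbar\chi_\hbar+\mathcal{Y}_\hbar(\prequantumL_g-\mathrm{Id})\chi_\hbar$, the first summand is bounded on $\mathcal{H}_\hbar^r$ by Lemma \ref{lem:boundedness_of_calY} and Corollary \ref{lm:XM_exchange}, and the second is $O(\hbar^\epsilon)$ in $\mathcal{H}_\hbar^r$. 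The continuous embeddings $\mathcal{H}_\hbar^{r,+}\hookrightarrow\mathcal{H}_\hbar^r$ and $\mathcal{H}_\hbar^r\hookrightarrow\mathcal{H}_\hbar^{r,-}$ then yield the desired bounds for this piece, with a constant independent of $\hbar$ and of $g\in\scrG_\hbar$ (since the constants in Proposition \ref{lm:L_g_Y_almost_identity} depend only on the quantities controlled in Setting II).

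For the remainder term $(\mathrm{Id}-\mathcal{Y}_\hbar)\prequantumL_g\chi_\hbar$, I will realize $\Bargmann_x\circ(\mathrm{Id}-\mathcal{Y}_\hbar)\circ\prequantumL_g\circ\multiplication(\chi_\hbar)\circ\Bargmann_x^*$ as an integral operator and derive a kernel estimate analogous to (\ref{eq:kernel_estimate_nonlinear}), but now \emph{without} the factor $\hbar^\epsilon$: using iterated integration by parts in the defining oscillatory integral, combined with Lemma \ref{lm:g} controlling the action $\mathcal{A}_g$, the conditions (G1)--(G3), and the observation that the phase $\xi(g(y)-y)/\hbar$ is stationary precisely on the graph of the canonical map $G$, one obtains a kernel $K$ satisfying
\begin{equation*}
|K(x,\xi;x',\xi')|\le C_\nu\langle\hbar^{-1/2}|(x,\xi)-G(x',\xi')|\rangle^{-\nu}
\end{equation*}
for arbitrarily large $\nu$. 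On $\supp(\chi_\hbar)$ we have $|x'|\le2\hbar^{1/2-\theta}$, and the estimates of Lemma \ref{lm:g} then give $\|G-\mathrm{Id}\|_{C^1}\le C\hbar^{\beta(1/2-\theta)}$ on the support of the integrand. This closeness of $G$ to the identity, together with the gap between the cone apertures $1/2$ and $1/9$, $3$ respectively, shows
\begin{equation*}
\mathcal{W}_\hbar^r(x,\xi)\le C\cdot\mathcal{W}_\hbar^{r,+}(x',\xi')\cdot\langle\hbar^{-1/2}|(x,\xi)-G(x',\xi')|\rangle^{2r}
\end{equation*}
and the analogous inequality with $\mathcal{W}^r$ on the right and $\mathcal{W}^{r,-}$ on the left. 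Choosing $\nu>2r+4d$, Schur's inequality (as in Lemma \ref{lm:boundedness_of_molifier2}) yields the operator bound from $\mathcal{H}_\hbar^{r,+}$ to $\mathcal{H}_\hbar^r$ and from $\mathcal{H}_\hbar^r$ to $\mathcal{H}_\hbar^{r,-}$, with constants depending only on the data listed in the statement.

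The main obstacle is the comparison of the escape functions across the canonical map $G$. This is where the particular design of the functions $W^{r,\pm}$ with wider/narrower cones than $W^r$ becomes essential: only because $G$ is a small $C^1$-perturbation of the identity on the semiclassical scale $\hbar^{1/2-2\theta}$ can its action on the cone structure be absorbed into the polynomial loss $\langle\hbar^{-1/2}|(x,\xi)-G(x',\xi')|\rangle^{2r}$ allowed by the weight quasi-invariance (\ref{eq:estmate_on_Wr}). Once this geometric observation is properly quantified, the rest is a routine application of Schur's lemma.
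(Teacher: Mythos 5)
Your decomposition is the same as the paper's: isolate the piece where the output is truncated away from the trapped set (your $(\mathrm{Id}-\mathcal{Y}_\hbar)\circ\prequantumL_g\circ\multiplication(\chi_\hbar)$, the paper's $\multiplication(1-Y_\hbar)\circ\Bargmann_x\circ\mathcal{L}_g\circ\multiplication(\chi_\hbar)\circ\Bargmann_x^*$) and dispose of the remainder via Proposition~\ref{lm:L_g_Y_almost_identity}, Lemma~\ref{lem:boundedness_of_calY} and Corollary~\ref{lm:XM_exchange}. That part is fine and matches the paper. The divergence is in how you treat the hard piece, and there you have a genuine gap.

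Your claimed kernel bound
\[
|K(x,\xi;x',\xi')|\le C_\nu\langle\hbar^{-1/2}|(x,\xi)-G(x',\xi')|\rangle^{-\nu}
\]
with $C_\nu$ uniform in $\hbar$, $g$ and arbitrarily large $\nu$, is \emph{not} true for a nonlinear $g\in\scrG_\hbar$ once $|\xi'|$ leaves the microscopic regime. Condition (G3) only gives $|D^2g|=O(1)$, not smallness; the quadratic term in the phase, $\hbar^{-1}\xi\cdot D^2g\cdot(y-x')^2$, produces over a wave-packet width $|y-x'|\sim\hbar^{1/2}$ a phase excursion of order $|\xi|\cdot|D^2g|$, which is $\gg 1$ as soon as $|\xi|\gg 1$. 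The corresponding spread of the Bargmann transform of $\prequantumL_g\phi_{x',\xi'}$ in the $\xi$-variable has width $\sim|\xi'|\cdot|D^2g|\cdot\hbar^{1/2}\gg\hbar^{1/2}$, so the $\nu$-fold integration by parts does not give the asserted uniform $\hbar^{1/2}$-localization. (You can see this concretely on $g(y)=y+cy^2/2$ in one dimension: the chirp factor $e^{ic\xi' y^2/(2\hbar)}$ spreads the packet in frequency over an interval of length $\sim c|\xi'|\hbar^{1/2}$.) The cutoff $\chi_\hbar$ restricts $|y|\lesssim\hbar^{1/2-\theta}$, which controls $\|Dg-\mathrm{Id}\|$ but does not make the second derivatives of $g$ small and does not restrict $|\xi'|$. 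Since the region on which $(\mathrm{Id}-\mathcal{Y}_\hbar)$ lives has $|(x,\xi)|\ge\hbar^{1/2-2\theta}$, macroscopic frequencies are exactly the ones you must control, and there your kernel estimate fails. Note that in the proof of Proposition~\ref{lm:L_g_Y_almost_identity} the analogous kernel bound (\ref{eq:kernel_estimate_nonlinear}) is asserted only for $(x,\xi)\in\supp Y_\hbar$, i.e. on the microscopic range where the chirp is genuinely negligible; you have silently dropped that restriction.

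The underlying geometric observation you make — that the kernel is supported near the graph of the canonical map $G$, which is a small $C^1$-perturbation of the identity, so its image of the narrow cone fits strictly inside the wide cone — is the right heuristic. But the spreading scale is $\sim|\xi'|$ rather than $\hbar^{1/2}$ in the nonlinear case, so the Schur argument must be run one dyadic frequency shell at a time, where the spread ($\lesssim 2^n\hbar^{1/2}$ in physical units after normalization) is still a small fraction of the shell width $\sim 2^n$. This is precisely the Littlewood--Paley decomposition that the paper's proof introduces for the $\multiplication(1-Y_\hbar)$ piece: a dyadic partition $\{\psi_n\},\{\psi^+_n\}$ with wider/narrower cones, with rapid $2^{-\nu\max(n,n')}$ decay for off-diagonal shells. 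Your approach could be repaired by replacing the global kernel estimate with a shell-by-shell estimate of this kind, but as written the single kernel bound you rely on is incorrect.
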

The conclusion of this proposition is quite natural in view of the
facts that 
\begin{equation}
\mathcal{W}_{\hbar}^{r}\circ G(x,\xi)\cdot\chi_{\hbar}(x)\le\mathcal{W}_{\hbar}^{r,+}(x,\xi),\quad\mathcal{W}_{\hbar}^{r,-}\circ G(x,\xi)\cdot\chi_{\hbar}(x)\le\mathcal{W}_{\hbar}^{r}(x,\xi)\label{eq:w}
\end{equation}
for the canonical map 
\[
G:\real^{2d}\oplus\real^{2d}\to\real^{2d}\oplus\real^{2d},\quad G(x,\xi)=(g(x),{}^{t}(Dg(x))^{-1}(\xi))
\]
associated to the operator $\mathcal{L}_{g}$. (Recall the argument
in Subsection \ref{sub:Semiclassical-description-of-prequantum_op}.)
And it can be proved in essentially same ways as the argument given
in the papers \cite{baladi_05} and \cite{fred-roy-sjostrand-07},
where Littlewood-Paley theory and the theory of pseudodifferential
operator is used respectively. Below we give a proof below by interpreting
the argument in \cite{baladi_05} in terms of the Bargmann transform.
(But the reader may skip it because this is not a very essential part
of our argument and may be proved in various ways. )
\begin{proof}
We decompose the operator $\Bargmann_{x}\circ\mathcal{L}_{g}\circ\multiplication(\chi_{\hbar})\circ\Bargmann_{x}^{*}$
as 
\begin{align*}
\Bargmann_{x}\circ\mathcal{L}_{g}\circ\multiplication(\chi_{\hbar})\circ\Bargmann_{x}^{*} & =\multiplication(1-Y_{\hbar})\circ\Bargmann_{x}\circ\mathcal{L}_{g}\circ\multiplication(\chi_{\hbar})\circ\Bargmann_{x}^{*}\\
 & \qquad+\multiplication(Y_{\hbar})\circ\Bargmann_{x}\circ(\mathcal{L}_{g}-\mathrm{Id})\circ\multiplication(\chi_{\hbar})\circ\Bargmann_{x}^{*}\\
 & \qquad+\multiplication(1-Y_{\hbar})\circ\Bargmann_{x}\circ\multiplication(\chi_{\hbar})\circ\Bargmann_{x}^{*}.
\end{align*}
If we apply Lemma \ref{lm:L_g_Y_almost_identity} (or more precisely
(\ref{eq:L_g_intermidiate_estimate}) in the proof) to the second
term and Lemma \ref{lem:boundedness_of_calY} and corollary \ref{lm:XM_exchange}
to the third term, we see that these two operators are bounded operators
on $\mathcal{H}_{\hbar}^{r}(\real^{2d})$ and the operator norms are
bounded by an absolute constant. Hence, in order to prove the former
claim of the theorem, it suffices to show that the operator norm of
\begin{equation}
\multiplication(1-Y_{\hbar})\circ\Bargmann_{x}\circ\mathcal{L}_{g}\circ\multiplication(\chi_{\hbar})\circ\Bargmann_{x}^{*}:L^{2}(\real^{2d},(\mathcal{W}_{\hbar}^{r,+})^{2})\to L^{2}(\real^{2d},(\mathcal{W}_{\hbar}^{r})^{2})\label{eq:(I-Y)LgM-1}
\end{equation}
is bounded by a constant $C_{0}$ with the same property as stated
in the proposition. (The latter claim is proved in the parallel manner.)
Below we give a proof%
\footnote{The argument in the following part is a little sketchy. For the details,
we refer the argument in \cite{baladi_05}, though it will not be
very necessary.%
}.

We take and fix $1/3<a^{+}<b^{+}<a<b<1/2$. Then we introduce a $C^{\infty}$partition
of unity $\{\psi_{n}\}_{n\in\mathbb{Z}}$ (resp. $\{\psi_{n}^{+}\}_{n\in\mathbb{Z}}$)
on $\real^{2d}$ with the following properties: 
\begin{itemize}
\item The function $\psi_{n}$ (resp. $\psi_{n}^{+}$) is supported on the
disk $|\xi|\le1$ if $n=0$ and on the annulus $2^{|n|-1}\le|\xi|\le2^{|n|+1}$
otherwise.
\item The function $\psi_{n}$ is supported on the cone $\mathbf{C}_{+}(b)$
if $n>0$ and on the cone $\mathbf{C}_{-}(1/a)=\overline{\real^{2d}\setminus\mathbf{C_{+}}(a)}$
if $n<0$. Respectively, the function $\psi_{n}^{+}$ is supported
on the cone $\mathbf{C}_{+}(b^{+})$ if $n>0$ and on the cone $\mathbf{C}_{-}(1/a^{+})=\overline{\real^{2d}\setminus\mathbf{C_{+}}(a^{+})}$
if $n<0$. 
\item The normalized functions $\xi\mapsto\psi_{n}(2^{n}\xi$) (resp. $\xi\mapsto\psi_{n}^{+}(2^{n}\xi)$)
are uniformly bounded in $C^{\infty}$ norm. 
\end{itemize}
For each $\hbar>0$, we define functions $\psi_{n}$ and $\psi_{n}^{+}$
on $T^{*}\real^{2d}=\real^{2d}\oplus\real^{2d}$ for $n\in\mathbb{Z}$
by 
\[
\psi_{n,\hbar}(x,\xi)=\psi_{n}(\hbar^{-1/2}\zeta)\quad\mbox{resp. }\psi_{n,\hbar}^{+}(x,\xi)=\psi_{n}^{+}(\hbar^{-1/2}\zeta)
\]
where $\zeta=(\zeta_{p},\zeta_{q})$ is the coordinates introduced
in Proposition \ref{prop:Normal-coordinates.}. Then, from the definition
of the partition of unities above, we have 
\[
\|\mathcal{W}_{\hbar}^{r}\cdot u\|_{L^{2}}^{2}\le C_{0}\sum_{n=-\infty}^{\infty}2^{2rn}\|\psi_{n,\hbar}\cdot u\|_{L^{2}}
\]
and also 
\begin{equation}
\sum_{n=-\infty}^{\infty}2^{2rn}\|\mathbf{\psi}_{n,\hbar}^{+}\cdot u\|_{L^{2}}^{2}\le C_{0}\|\mathcal{W}_{\hbar}^{r,+}\cdot u\|_{L^{2}}^{2}\label{eq:estimate_of_anistropic_norm}
\end{equation}
for any function $u(x,\xi)\in C^{\infty}(\real^{2d}\oplus\real^{2d})$.
From the first inequality above and the definition of the function
$Y_{\hbar}$, we have 
\begin{multline*}
\|\mathcal{W}_{\hbar}^{r}\cdot(1-Y_{\hbar})\cdot\Bargmann_{x}\circ\mathcal{L}_{g}\circ\multiplication(\chi_{\hbar})\circ\Bargmann_{x}^{*}u\|_{L^{2}}\\
\le C_{0}\sum_{|n|\ge\hbar^{-\theta}}\sum_{n'=-\infty}^{\infty}2^{2rn}\|\psi_{n,\hbar}\cdot\Bargmann_{x}\circ\mathcal{L}_{g}\circ\multiplication(\chi_{\hbar})\circ\Bargmann_{x}^{*}(\psi_{n',\hbar}^{+}\cdot u)\|_{L^{2}}^{2}.
\end{multline*}
For the summands on the right hand side, we observe that 
\begin{itemize}
\item From Lemma \ref{lm:Bargmann_is_isometry}, the $L^{2}$-operator norm
of $ $$\Bargmann_{x}\circ\mathcal{L}_{g}\circ\multiplication(\chi_{\hbar})\circ\Bargmann_{x}^{*}$
is bounded by $1$, so 
\[
\|\psi_{n,\hbar}\cdot\Bargmann_{x}\circ\mathcal{L}_{g}\circ\multiplication(\chi_{\hbar})\circ\Bargmann_{x}^{*}(\psi_{n',\hbar}^{+}\cdot u)\|_{L^{2}}^{2}\le\|\psi_{n',\hbar}^{+}\cdot u\|_{L^{2}}^{2}.
\]

\item If either $|n|-|n'|\ge3$ or $n<0<n'$, we have 
\[
\mathrm{dist(}Dg_{x}^{t}(\supp\psi_{n,\hbar}),\,\supp\psi_{n',\hbar})>C_{0}\cdot\hbar^{1/2}2^{\max\{n,n'\}}\quad\mbox{for }x\in\supp\chi_{\hbar}
\]
and, by crude estimate using integration by parts, we get the estimate
\[
\|\psi_{n,\hbar}\cdot\Bargmann_{x}\circ\mathcal{L}_{g}\circ\multiplication(\chi_{\hbar})\circ\Bargmann_{x}^{*}(\psi_{n,\hbar}^{+}\cdot u)\|_{L^{2}}^{2}\le C_{\nu}(g)\cdot2^{-\nu\cdot\max\{n,n'\}}\|\psi_{n',\hbar}^{+}\cdot u\|_{L^{2}}^{2}
\]
where the constant $C_{\nu}(g)$ may depend on $g$ and $\nu$ but
not on $\hbar$. Otherwise we have $n<n'+3$ and $2^{rn}\le C_{0}2^{rn'}$.
\end{itemize}
From these observations and (\ref{eq:estimate_of_anistropic_norm}),
we can conclude the required estimate: 
\[
\|\mathcal{W}_{\hbar}^{r}\cdot(1-Y_{\hbar})\cdot\Bargmann_{x}\circ\mathcal{L}_{g}\circ\multiplication(\chi_{\hbar})\circ\Bargmann_{x}^{*}u\|_{L^{2}}^{2}\le C_{0}\sum_{n'=-\infty}^{\infty}2^{2rn}\|\psi_{n'}^{+}\cdot u\|_{L^{2}}^{2}\le C_{0}\|\mathcal{W}_{\hbar}u\|_{L^{2}}^{2}
\]
for sufficiently small $\hbar>0$.
\end{proof}

The next lemma will be used in the key step in the proof of Theorem
\ref{thm:More-detailled-description_of_spectrum}. This lemma implies
that we can ignore the action of nonlinear diffeomorphisms in $\scrG_{\hbar}$
when we restrict it to the image of the projectors $t_{\hbar}^{(k)}$.
\begin{lem}
\label{lm:L_g_almost_identity} There exist constants $\epsilon>0$
and $C>0$ independent of $\hbar$ such that the following holds:
Let $\psi\in\scrX_{\hbar}$ be supported on the disk $\mathbb{D}(2\hbar^{1/2-\theta})$
and let $g\in\scrG_{\hbar}$, $0\le k\le n$, then it holds 
\[
\left\Vert (\prequantumL_{g}-\mathrm{Id)}\circ\multiplication(\psi)\circ\romet_{\hbar}^{(k)}\right\Vert _{\mathcal{H}_{\hbar}^{r}(\real^{2d})}\le C\hbar^{\epsilon}
\]
 and 
\[
\left\Vert \romet_{\hbar}^{(k)}\circ(\prequantumL_{g}-\mathrm{Id})\circ\multiplication(\psi)\right\Vert _{\mathcal{H}_{\hbar}^{r}(\real^{2d})}\le C\hbar^{\epsilon}.
\]
\end{lem}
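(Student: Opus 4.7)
I will prove the first inequality in detail; the second follows by a parallel argument that I sketch at the end. The key idea is to split the operator via the phase-space truncation $\mathrm{Id}=\mathcal{Y}_{\hbar}+(\mathrm{Id}-\mathcal{Y}_{\hbar})$, so that on the range of $\mathcal{Y}_{\hbar}$ we exploit that $\prequantumL_{g}-\mathrm{Id}$ is small (Proposition~\ref{lm:L_g_Y_almost_identity}), while on the complementary part we exploit that the projector $\romet_{\hbar}^{(k)}$ is concentrated on $\mathrm{Im}\,\mathcal{Y}_{\hbar}$ after multiplying by $\psi\in\mathscr{X}_{\hbar}$ (Lemma~\ref{YT_exchange}). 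Throughout we use freely that $\chi_{\hbar}\cdot\psi=\psi$, so that $\multiplication(\psi)=\multiplication(\chi_{\hbar})\circ\multiplication(\psi)$.

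Write $(\prequantumL_{g}-\mathrm{Id})\circ\multiplication(\psi)\circ\romet_{\hbar}^{(k)}=A_{1}+A_{2}$ with
\[
A_{1}=(\prequantumL_{g}-\mathrm{Id})\circ\mathcal{Y}_{\hbar}\circ\multiplication(\chi_{\hbar})\circ\multiplication(\psi)\circ\romet_{\hbar}^{(k)},\quad
A_{2}=(\prequantumL_{g}-\mathrm{Id})\circ(\mathrm{Id}-\mathcal{Y}_{\hbar})\circ\multiplication(\psi)\circ\romet_{\hbar}^{(k)}.
\]
The bound $\|A_{1}\|_{\mathcal{H}_{\hbar}^{r}}\le C\hbar^{\epsilon}$ is immediate: Proposition~\ref{lm:L_g_Y_almost_identity} gives $\|(\prequantumL_{g}-\mathrm{Id})\circ\mathcal{Y}_{\hbar}\circ\multiplication(\chi_{\hbar})\|_{\mathcal{H}_{\hbar}^{r}}\le C\hbar^{\epsilon}$, and $\multiplication(\psi)\circ\romet_{\hbar}^{(k)}$ is uniformly bounded on $\mathcal{H}_{\hbar}^{r}$ by Corollary~\ref{lm:XM_exchange} together with the fact that $\romet_{\hbar}^{(k)}$ is a projector on $\mathcal{H}_{\hbar}^{r}$. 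For $A_{2}$, Lemma~\ref{YT_exchange} gives $\|(\mathrm{Id}-\mathcal{Y}_{\hbar})\circ\multiplication(\psi)\circ\romet_{\hbar}^{(k)}\|_{\mathcal{H}_{\hbar}^{r,-}\to\mathcal{H}_{\hbar}^{r,+}}\le C\hbar^{\theta}$, and the continuous inclusions $\mathcal{H}_{\hbar}^{r,+}\hookrightarrow\mathcal{H}_{\hbar}^{r}\hookrightarrow\mathcal{H}_{\hbar}^{r,-}$ induced by the weight ordering (\ref{eq:wrpm}) upgrade this to a bound $\mathcal{H}_{\hbar}^{r}\to\mathcal{H}_{\hbar}^{r,+}$. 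It remains to bound $(\prequantumL_{g}-\mathrm{Id})$ from $\mathcal{H}_{\hbar}^{r,+}$ to $\mathcal{H}_{\hbar}^{r}$, which is delicate because neither operator is bounded on $\mathcal{H}_{\hbar}^{r}$ without a spatial cutoff. I would pick $\tilde\chi_{\hbar}\in\mathscr{X}_{\hbar}$ with $\tilde\chi_{\hbar}\equiv 1$ on $\mathbb{D}(4\hbar^{1/2-\theta})$ and decompose
\[
A_{2}=(\prequantumL_{g}-\mathrm{Id})\circ\multiplication(\tilde\chi_{\hbar})\circ(\mathrm{Id}-\mathcal{Y}_{\hbar})\circ\multiplication(\psi)\circ\romet_{\hbar}^{(k)}+R,
\]
where $R$ collects the commutator $[\mathcal{Y}_{\hbar},\multiplication(\tilde\chi_{\hbar})]$ acting further to the right. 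The main piece is bounded by Proposition~\ref{pp:bdd_g} (applied with $\tilde\chi_{\hbar}$ in place of $\chi_{\hbar}$; the proof goes through verbatim with $\tilde\chi_{\hbar}\in\mathscr{X}_{\hbar}$) times the $O(\hbar^{\theta})$ factor above. The commutator $[\mathcal{Y}_{\hbar},\multiplication(\tilde\chi_{\hbar})]$ is $O(\hbar^{\theta})$ on $\mathcal{H}_{\hbar}^{r}$ by Lemma~\ref{lem:boundedness_of_calY} and, moreover, its Schwartz kernel decays as $\langle\hbar^{-1/2}|x-x'|\rangle^{-\nu}$ for any $\nu>0$ (as in the proof of Lemma~\ref{lm:lift_of_multiplication_operator}), so its image is essentially concentrated on $\supp\nabla\tilde\chi_{\hbar}$ modulo $O(\hbar^{\nu})$; inserting a still larger cutoff in $\mathscr{X}_{\hbar}$ and invoking Proposition~\ref{pp:bdd_g} once more shows $\|R\|_{\mathcal{H}_{\hbar}^{r}}\le C\hbar^{\theta}$.

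The second inequality is proved in the mirror-image way: decompose $\romet_{\hbar}^{(k)}\circ(\prequantumL_{g}-\mathrm{Id})\circ\multiplication(\psi)$ with $\mathrm{Id}=\mathcal{Y}_{\hbar}+(\mathrm{Id}-\mathcal{Y}_{\hbar})$ inserted to the right of $\romet_{\hbar}^{(k)}$; the ``good'' piece uses the first form of Proposition~\ref{lm:L_g_Y_almost_identity} (the one with $\mathcal{Y}_{\hbar}$ on the left), while the ``bad'' piece uses the second form of Lemma~\ref{YT_exchange} and the $\mathcal{H}_{\hbar}^{r}\to\mathcal{H}_{\hbar}^{r,-}$ bound of Proposition~\ref{pp:bdd_g}. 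The only nontrivial modification is that one now needs $(\prequantumL_{g}-\mathrm{Id})\circ\multiplication(\psi)$ to land in a space where $\romet_{\hbar}^{(k)}\circ(\mathrm{Id}-\mathcal{Y}_{\hbar})$ is controlled; here I would use that the image of $(\prequantumL_{g}-\mathrm{Id})\circ\multiplication(\psi)$ is spatially contained in $\mathbb{D}(C\hbar^{1/2-\theta})$ (since $g$ is $C^{0}$-close to the identity and $\supp\psi\subset\mathbb{D}(2\hbar^{1/2-\theta})$) to insert a cutoff $\tilde\chi_{\hbar}\in\mathscr{X}_{\hbar}$ equal to one there and reduce to the setting of Lemma~\ref{YT_exchange}. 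The main obstacle in both cases is the unboundedness of $\prequantumL_{g}-\mathrm{Id}$ on $\mathcal{H}_{\hbar}^{r}$, resolved uniformly by inserting auxiliary spatial cutoffs in $\mathscr{X}_{\hbar}$ and using the pseudo-locality of $\mathcal{Y}_{\hbar}$.
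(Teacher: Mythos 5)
Your decomposition $\mathrm{Id}=\mathcal{Y}_{\hbar}\circ\multiplication(\chi_{\hbar})+(\mathrm{Id}-\mathcal{Y}_{\hbar})$ is correct and your treatment of $A_{1}$ is clean (in fact a bit more direct than the paper's). But there is a genuine gap in the treatment of $A_{2}=(\prequantumL_{g}-\mathrm{Id})\circ(\mathrm{Id}-\mathcal{Y}_{\hbar})\circ\multiplication(\psi)\circ\romet_{\hbar}^{(k)}$. The leftmost factor is $(\prequantumL_{g}-\mathrm{Id})$ with \emph{no} spatial cutoff attached to $\prequantumL_{g}$. Since $(\mathrm{Id}-\mathcal{Y}_{\hbar})$ destroys spatial localization, the input to $\prequantumL_{g}$ here is not supported in $\mathbb{D}(\hbar^{1/2-2\theta})$, and the only boundedness available (Proposition~\ref{pp:bdd_g}) is for $\prequantumL_{g}\circ\multiplication(\chi_{\hbar})$, not for $\prequantumL_{g}$ alone. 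You notice this and insert $\multiplication(\tilde\chi_{\hbar})$, but the remainder $R=(\prequantumL_{g}-\mathrm{Id})\circ(\mathrm{Id}-\multiplication(\tilde\chi_{\hbar}))\circ(\mathrm{Id}-\mathcal{Y}_{\hbar})\circ\multiplication(\psi)\circ\romet_{\hbar}^{(k)}$ reproduces the same problem: the function $(\mathrm{Id}-\multiplication(\tilde\chi_{\hbar}))v$ is supported in $\{|x|\ge 4\hbar^{1/2-\theta}\}$ and extends outside $\mathbb{D}(\hbar^{1/2-2\theta})$, so $\prequantumL_{g}$ is again applied outside its domain; the remark that its image is ``essentially concentrated on $\supp\nabla\tilde\chi_{\hbar}$ modulo $O(\hbar^{\nu})$'' does not resolve the $O(\hbar^{\nu})$ tail, and inserting yet another cutoff in $\scrX_{\hbar}$ only starts an infinite regress, since all members of $\scrX_{\hbar}$ have support of radius $O(\hbar^{1/2-\theta})\ll\hbar^{1/2-2\theta}$.

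The paper's decomposition differs in exactly the way needed to avoid this. Using $\chi_{\hbar}\psi=\psi$, one first rewrites
\[
(\prequantumL_{g}-\mathrm{Id})\circ\multiplication(\psi)=(\prequantumL_{g}\circ\multiplication(\chi_{\hbar})-\mathrm{Id})\circ\multiplication(\psi)
\]
and \emph{then} inserts $\mathrm{Id}=\mathcal{Y}_{\hbar}+(\mathrm{Id}-\mathcal{Y}_{\hbar})$ between the two factors:
\[
(\prequantumL_{g}\circ\multiplication(\chi_{\hbar})-\mathrm{Id})\circ\mathcal{Y}_{\hbar}\circ\multiplication(\psi)\circ\romet_{\hbar}^{(k)}
+(\prequantumL_{g}\circ\multiplication(\chi_{\hbar})-\mathrm{Id})\circ(\mathrm{Id}-\mathcal{Y}_{\hbar})\circ\multiplication(\psi)\circ\romet_{\hbar}^{(k)}.
\]
Now both pieces carry the cutoff $\multiplication(\chi_{\hbar})$ attached to $\prequantumL_{g}$, so Proposition~\ref{pp:bdd_g} bounds the leftmost factor from $\mathcal{H}_{\hbar}^{r,+}$ to $\mathcal{H}_{\hbar}^{r}$, and the second piece is disposed of by Lemma~\ref{YT_exchange} (first estimate) composed with Proposition~\ref{pp:bdd_g}, with no auxiliary cutoff and no remainder. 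The first piece is then split by commuting $\mathcal{Y}_{\hbar}$ past $\multiplication(\psi)$, yielding $(\prequantumL_{g}-\mathrm{Id})\circ\multiplication(\psi)\circ\mathcal{Y}_{\hbar}\circ\romet_{\hbar}^{(k)}$ plus a commutator term controlled by Lemma~\ref{lem:boundedness_of_calY}. Your $A_{1}$-argument can be kept, but for $A_{2}$ you should adopt the rewriting above rather than trying to retrofit a cutoff after the fact.
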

\begin{proof}
We decompose the operator $(\prequantumL_{g}-\mathrm{Id)}\circ\multiplication(\psi)\circ\romet_{\hbar}^{(k)}$
into 
\begin{equation}
(\prequantumL_{g}\circ\multiplication(\chi_{\hbar})-\mathrm{Id)}\circ\mathcal{Y}_{\hbar}\circ\multiplication(\psi)\circ\romet_{\hbar}^{(k)}\quad\mbox{and }\quad(\prequantumL_{g}\circ\multiplication(\chi_{\hbar})-\mathrm{Id)}\circ(\mathrm{Id}-\mathcal{Y}_{\hbar})\circ\multiplication(\psi)\circ\romet_{\hbar}^{(k)}.\label{eq:decomposition_in_pf_lm:L_g_almost_identity}
\end{equation}
(Note that we have $\chi_{\hbar}\cdot\psi=\psi$ from the assumption.)
The operator norm of the latter is also bounded by $C\hbar^{\epsilon}$
from Lemma \ref{YT_exchange} and Proposition \ref{pp:bdd_g}. We
further write the former part as 
\begin{align*}
 & (\prequantumL_{g}-\mathrm{Id)}\circ\multiplication(\psi)\circ\mathcal{Y}_{\hbar}\circ\romet_{\hbar}^{(k)}+(\prequantumL_{g}\circ\multiplication(\chi_{\hbar})-\mathrm{Id)}\circ[\mathcal{Y}_{\hbar},\multiplication(\psi)]\circ\romet_{\hbar}^{(k)}
\end{align*}
Then we see that the operator norm of the former part is bounded by
$C\hbar^{\epsilon}$, from Lemma \ref{lm:L_g_Y_almost_identity},
Corollary \ref{lm:pik_rpm}, Lemma \ref{lem:boundedness_of_calY}
and Proposition \ref{pp:bdd_g}. The latter part is also bounded by
$C\hbar^{\epsilon}$ from Lemma \ref{cor:XT_exchange}. Thus we obtain
the former claim. The latter claim can be proved in the parallel manner.\end{proof}
\begin{cor}
\label{cor:L_g_almost_commutes_with_t^k}There exist constants $\epsilon>0$
and $C>0$ independent of $\hbar$ such that, for any $\psi\in\scrX_{\hbar}$
and $g\in\scrG_{\hbar}$ it holds 
\[
\left\Vert [\prequantumL_{g}\circ\multiplication(\psi),\,\romet_{\hbar}^{(k)}]\right\Vert _{\mathcal{H}_{\hbar}^{r}(\real^{2d})}\le C\hbar^{\epsilon}\quad\mbox{for }0\le k\le n
\]
 and also 
\[
\left\Vert [\prequantumL_{g}\circ\multiplication(\psi),\,\tilde{\romet}_{\hbar}]\right\Vert _{\mathcal{H}_{\hbar}^{r}(\real^{2d})}\le C\hbar^{\epsilon}.
\]
\end{cor}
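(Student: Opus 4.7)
The plan is to reduce the commutator to the simpler commutator $[\multiplication(\psi), \romet_{\hbar}^{(k)}]$ by using Lemma \ref{lm:L_g_almost_identity} to discard the action of $\prequantumL_g$, and then invoke Corollary \ref{cor:XT_exchange-1}.

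Concretely, I would expand
\[
[\prequantumL_{g}\circ\multiplication(\psi),\,\romet_{\hbar}^{(k)}] = \prequantumL_{g}\circ\multiplication(\psi)\circ\romet_{\hbar}^{(k)} - \romet_{\hbar}^{(k)}\circ\prequantumL_{g}\circ\multiplication(\psi)
\]
and rewrite each term as $(\prequantumL_g - \mathrm{Id})(\cdots) + (\cdots)$. By the first inequality of Lemma \ref{lm:L_g_almost_identity},
\[
\left\|\prequantumL_{g}\circ\multiplication(\psi)\circ\romet_{\hbar}^{(k)} - \multiplication(\psi)\circ\romet_{\hbar}^{(k)}\right\|_{\mathcal{H}_{\hbar}^{r}} = \left\|(\prequantumL_g - \mathrm{Id})\circ\multiplication(\psi)\circ\romet_{\hbar}^{(k)}\right\|_{\mathcal{H}_{\hbar}^{r}} \le C\hbar^{\epsilon},
\]
and by the second inequality,
\[
\left\|\romet_{\hbar}^{(k)}\circ\prequantumL_{g}\circ\multiplication(\psi) - \romet_{\hbar}^{(k)}\circ\multiplication(\psi)\right\|_{\mathcal{H}_{\hbar}^{r}} = \left\|\romet_{\hbar}^{(k)}\circ(\prequantumL_g - \mathrm{Id})\circ\multiplication(\psi)\right\|_{\mathcal{H}_{\hbar}^{r}} \le C\hbar^{\epsilon}.
\]
Subtracting, the commutator is $[\multiplication(\psi),\romet_{\hbar}^{(k)}] + O(\hbar^\epsilon)$ in operator norm. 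Corollary \ref{cor:XT_exchange-1} bounds the remaining commutator by $C\hbar^\theta$, so the total bound is $C\hbar^{\min(\theta,\epsilon)}$, giving the first claim (after renaming $\epsilon$).

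For the second claim, I would use the identity $\tilde{\romet}_{\hbar} = \mathrm{Id} - \sum_{k=0}^{n}\romet_{\hbar}^{(k)}$ from (\ref{eq:def_tt}). Since $[A,\mathrm{Id}] = 0$ for any $A$, linearity of the commutator gives
\[
[\prequantumL_{g}\circ\multiplication(\psi),\,\tilde{\romet}_{\hbar}] = -\sum_{k=0}^{n}[\prequantumL_{g}\circ\multiplication(\psi),\,\romet_{\hbar}^{(k)}],
\]
and since $n$ is fixed (independent of $\hbar$), the triangle inequality applied to the finite sum together with the first claim yields the desired bound $C'\hbar^\epsilon$.

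There is really no serious obstacle here: the work has already been done in Lemma \ref{lm:L_g_almost_identity} (which encapsulates the nontrivial analysis on how nonlinearity is killed by the projectors $\romet_\hbar^{(k)}$) and in Corollary \ref{cor:XT_exchange-1} (which controls the commutator of a multiplication operator with the projector). The only mild subtlety is that the two lemmas are stated on the same Hilbert space $\mathcal{H}_{\hbar}^{r}(\real^{2d})$, so the composition and subtraction make sense without having to pass through the pair $\mathcal{H}_{\hbar}^{r,\pm}$; this allows the argument to be assembled cleanly by the triangle inequality without further technical intervention.
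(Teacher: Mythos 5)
Your proof is correct and follows exactly the route the paper takes: reduce the commutator to $[\multiplication(\psi),\romet_{\hbar}^{(k)}]$ via the two inequalities of Lemma \ref{lm:L_g_almost_identity}, bound that residual commutator by Corollary \ref{cor:XT_exchange-1}, and handle $\tilde{\romet}_{\hbar}$ via the finite-sum relation $\tilde{\romet}_{\hbar}=\mathrm{Id}-\sum_{k=0}^{n}\romet_{\hbar}^{(k)}$. The paper states this in one sentence ("immediate consequence of the last lemma and Lemma \ref{cor:XT_exchange-1}"); you have simply written out the triangle-inequality bookkeeping it leaves implicit.
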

\begin{proof}
The former claim is an immediate consequence of the last lemma and
Lemma \ref{cor:XT_exchange-1}. The latter claim then follows from
the relation $\tilde{\romet}_{\hbar}=\mathrm{Id}-\sum_{k=0}^{n}\romet_{\hbar}^{(k)}$.\end{proof}
\begin{lem}
\label{lm:L_g_almost_identity-Tr} There exist constants $\epsilon>0$
and $C>0$, independent of $\hbar$, such that the following holds
true: Let $\psi\in\scrX_{\hbar}$ be supported on the disk $\mathbb{D}(2\hbar^{1/2-\theta})$
and let $g\in\scrG_{\hbar}$, $0\le k\le n$, then it holds 
\[
\left\Vert (\prequantumL_{g}-\mathrm{Id)}\circ\multiplication(\psi)\circ\romet_{\hbar}^{(k)}:\mathcal{H}_{\hbar}^{r}(\real^{2d})\to\mathcal{H}_{\hbar}^{r}(\real^{2d})\right\Vert _{tr}\le C\hbar^{-\theta d+\epsilon}
\]
 
\[
\left\Vert \romet_{\hbar}^{(k)}\circ(\prequantumL_{g}-\mathrm{Id})\circ\multiplication(\psi):\mathcal{H}_{\hbar}^{r}(\real^{2d})\to\mathcal{H}_{\hbar}^{r}(\real^{2d})\right\Vert _{tr}\le C\hbar^{-\theta d+\epsilon}
\]
and 
\[
\left\Vert [\prequantumL_{g}\circ\multiplication(\psi),\,\romet_{\hbar}^{(k)}]:\mathcal{H}_{\hbar}^{r}(\real^{2d})\to\mathcal{H}_{\hbar}^{r}(\real^{2d})\right\Vert _{tr}\le C\hbar^{-\theta d+\epsilon}
\]
\end{lem}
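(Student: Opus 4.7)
The plan is to upgrade the operator-norm bounds of Lemma \ref{lm:L_g_almost_identity} and Corollary \ref{cor:L_g_almost_commutes_with_t^k} to trace-norm bounds by combining them with the trace-class inputs of Corollary \ref{cor:XT_pm} and Corollary \ref{cor:XT_exchange-Tr}. First, I would reduce the third (commutator) estimate to the first two. Expanding
\[
[\prequantumL_g\circ M(\psi),\, t_\hbar^{(k)}] = (\prequantumL_g - \mathrm{Id})M(\psi)\, t_\hbar^{(k)} - t_\hbar^{(k)}(\prequantumL_g - \mathrm{Id})M(\psi) + [M(\psi),\, t_\hbar^{(k)}],
\]
the last term is bounded in trace norm by $C\hbar^{-\theta d+\theta}$ via Corollary \ref{cor:XT_exchange-Tr}, so it suffices to prove the first two bounds. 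These are essentially adjoints of each other, so it is enough to treat the first.

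For the first estimate, I would repeat the decomposition used in the proof of Lemma \ref{lm:L_g_almost_identity}. Using the identity $\chi_\hbar\psi = \psi$ (since $\supp\psi\subset\mathbb{D}(2\hbar^{1/2-\theta})$ and $\chi_\hbar\equiv 1$ there), write
\[
(\prequantumL_g - \mathrm{Id})M(\psi)\, t_\hbar^{(k)} = \mathcal{Y}_\hbar(\prequantumL_g - \mathrm{Id})M(\chi_\hbar)\cdot M(\psi)\, t_\hbar^{(k)} + (\mathrm{Id}-\mathcal{Y}_\hbar)(\prequantumL_g - \mathrm{Id})M(\psi)\, t_\hbar^{(k)} + R,
\]
where $R$ collects commutator remainders involving $[\mathcal{Y}_\hbar, M(\psi)]$ and $[M(\psi), t_\hbar^{(k)}]$. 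The first summand is a product of an operator of operator norm $O(\hbar^\epsilon)$ (by Proposition \ref{lm:L_g_Y_almost_identity}) with an operator whose trace norm is controlled by Corollary \ref{cor:XT_pm}; the inequality $\|AB\|_{\mathrm{Tr}}\le\|A\|_{\mathrm{op}}\|B\|_{\mathrm{Tr}}$ then yields the sought bound. The second summand is $O(\hbar^\nu)$ in trace norm for arbitrary $\nu$, by a direct kernel estimate at the Bargmann-transform level: the support of $\mathrm{Id}-\mathcal{Y}_\hbar$ lies outside $\{|(x,\xi)|\le\hbar^{1/2-2\theta}\}$ whereas $M(\psi)\, t_\hbar^{(k)}$ concentrates (via the kernel estimate (\ref{eq:kernel_of_pik})) within a $\hbar^{1/2}$-neighborhood of the trapped set over $\supp\psi\subset\mathbb{D}(2\hbar^{1/2-\theta})$, so the two are separated at a scale much larger than the wave-packet scale. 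The remainder $R$ is bounded in trace norm by Corollary \ref{cor:XT_exchange-Tr} for the $[M(\psi), t_\hbar^{(k)}]$ part, and by Lemma \ref{lem:boundedness_of_calY} combined with the trace-norm bound from Corollary \ref{cor:XT_pm} for the $[\mathcal{Y}_\hbar, M(\psi)]$ part.

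The main obstacle is the space bookkeeping. Corollary \ref{cor:XT_pm} is formulated between the anisotropic spaces $\mathcal{H}_\hbar^{r,-}\to\mathcal{H}_\hbar^{r,+}$, whereas the target bound is for $\mathcal{H}_\hbar^r\to\mathcal{H}_\hbar^r$. To apply $\|AB\|_{\mathrm{Tr}}\le\|A\|_{\mathrm{op}}\|B\|_{\mathrm{Tr}}$ coherently, I would insert an auxiliary cutoff $\tilde\psi\in\scrX_\hbar$ with $\tilde\psi\equiv 1$ on $\supp\psi$ and supported in a slightly larger disk, write $M(\psi) = M(\tilde\psi)M(\psi)$, and use Proposition \ref{pp:bdd_g} together with the kernel estimate (\ref{eq:kernel_estimate_nonlinear}) from the proof of Proposition \ref{lm:L_g_Y_almost_identity} to route $(\prequantumL_g - \mathrm{Id})M(\chi_\hbar)$ between the appropriate anisotropic spaces with bounded constants independent of $g\in\scrG_\hbar$. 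Moving $M(\tilde\psi)$ past $t_\hbar^{(k)}$ produces a commutator $[M(\tilde\psi), t_\hbar^{(k)}]$ which is controlled in trace norm by Corollary \ref{cor:XT_exchange-Tr} and in operator norm by Corollary \ref{cor:XT_exchange-1}. Verifying that the anisotropic kernel estimates (\ref{eq:kernel_of_pik})--(\ref{eq:kernel_of_pik2}) transfer through this detour is routine but tedious; no genuinely new ingredient beyond those established in Section \ref{sec:Nonlinear-prequantum-maps} is required.
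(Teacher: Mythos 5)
Your approach diverges from the paper's in the main decomposition step. The paper's proof is a one-line algebraic trick exploiting idempotency of the projector $t^{(k)}_\hbar$ together with $\chi_\hbar\psi=\psi$:
\[
(\prequantumL_{g}-\mathrm{Id})\multiplication(\psi)\,t^{(k)}_\hbar
=\bigl((\prequantumL_{g}-\mathrm{Id})\multiplication(\psi)\,t^{(k)}_\hbar\bigr)\circ\bigl(t^{(k)}_\hbar\multiplication(\chi_\hbar)\bigr)
+(\prequantumL_{g}-\mathrm{Id})\multiplication(\psi)\circ\bigl[\multiplication(\chi_\hbar),t^{(k)}_\hbar\bigr],
\]
so that the first factor on the right is exactly the operator already shown to have operator norm $O(\hbar^\epsilon)$ in Lemma \ref{lm:L_g_almost_identity}, the factor $t^{(k)}_\hbar\multiplication(\chi_\hbar)$ is trace class by Lemma \ref{lm:trace_basic}, and the commutator term is handled by Proposition \ref{pp:bdd_g} and Corollary \ref{cor:XT_exchange-Tr}. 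This lets the paper cite the already-proved operator-norm lemma as a black box. You instead re-open the proof of Lemma \ref{lm:L_g_almost_identity} and redo the $\mathcal{Y}_\hbar$-cutoff decomposition with trace-norm bookkeeping woven in — a genuinely different, though more labour-intensive, organization.

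There is, however, a real error in your treatment of the $(\mathrm{Id}-\mathcal{Y}_\hbar)$ piece. You assert that $(\mathrm{Id}-\mathcal{Y}_\hbar)(\prequantumL_g-\mathrm{Id})\multiplication(\psi)t^{(k)}_\hbar$ is $O(\hbar^\nu)$ in trace norm for arbitrary $\nu$, by ``separation of supports at a scale much larger than the wave-packet scale.'' This is false. The kernel estimate (\ref{eq:kernel_of_pik}) that you invoke has rapid ($O(\hbar^\infty)$) decay only in $|\nu_q-\nu_q'|$, $|\nu_p-\nu_p'|$; the decay in the transverse directions $|\zeta|,|\zeta'|$ is only polynomial of fixed order $r-k$, and $r$ is a fixed parameter of the Hilbert space, not at your disposal. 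The multiplier $\mathrm{Id}-\mathcal{Y}_\hbar$ activates on the set $|(x,\xi)|\gtrsim\hbar^{1/2-2\theta}$, which given $|\nu|\lesssim\hbar^{1/2-\theta}$ (forced by $\supp\psi$) means $|\zeta|\gtrsim\hbar^{1/2-2\theta}$; the resulting gain is $O(\hbar^{2\theta(r-k)})$, a finite positive power, not $O(\hbar^\nu)$ for all $\nu$. (Indeed, if your claim were correct it would improve the operator-norm bound of Lemma \ref{YT_exchange} from $O(\hbar^\theta)$ to $O(\hbar^\infty)$, since $\|\cdot\|_{\mathrm{op}}\le\|\cdot\|_{\mathrm{Tr}}$; the paper does not claim this.) The lemma still holds because $r-k>4d+2$ makes $\hbar^{2\theta(r-k)}$ small enough, but one must also multiply by a trace-class factor to convert the finite operator-norm gain into a trace-norm gain — which in practice requires inserting $t^{(k)}_\hbar\multiplication(\chi_\hbar)$ anyway, i.e.\ the idempotency trick the paper uses. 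So the gap is fixable, but the fix leads you back to the paper's decomposition.
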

\begin{proof}
We write the operators $(\prequantumL_{g}-\mathrm{Id)}\circ\multiplication(\psi)\circ\romet_{\hbar}^{(k)}$
as 
\[
(\prequantumL_{g}-\mathrm{Id)}\circ\multiplication(\psi)\circ\romet_{\hbar}^{(k)}=\left((\prequantumL_{g}-\mathrm{Id)}\circ\multiplication(\psi)\circ t_{\hbar}^{(k)}\right)\circ\left(t_{\hbar}^{(k)}\circ\multiplication(\chi_{\hbar})\right)+\left((\prequantumL_{g}-\mathrm{Id)}\circ\multiplication(\psi)\right)\circ[\multiplication(\chi_{\hbar}),t_{\hbar}^{(k)}].
\]
Then we obtain the first inequality by Lemma \ref{lm:L_g_almost_identity},
Lemma \ref{lm:trace_basic}, Proposition \ref{pp:bdd_g} and Corollary
\ref{cor:XT_exchange-Tr}. We obtain the second inequality in the
same manner. Then the third inequality follows from Corollary \ref{cor:XT_exchange-Tr}. \end{proof}

\section{\label{sec:Proofs-of-the_mains}Proof of the Theorem \ref{thm:band_structure}
for the band spectrum of the prequantum transfer operator}

In Section \ref{sub:Proof-of-Theorem_Nad_annuli} we show how to deduce
Theorem \ref{thm:band_structure-of_Laplacian} from Theorem \ref{thm:More-detailled-description_of_spectrum}.
Then the subsequent subsections will be devoted to the proof of Theorem
\ref{thm:More-detailled-description_of_spectrum}.

\subsection{\label{sub:Proof-of-Theorem_Nad_annuli}Proof of Theorem \ref{thm:band_structure}}

Recall that the damping function $D\left(x\right)=V\left(x\right)-V_{0}\left(x\right)$
has been defined in (\ref{eq:damping_function}).

\begin{framed}%
\begin{thm}
\textbf{\label{thm:More-detailled-description_of_spectrum} }Let $n\geq0$
and take sufficiently large $r$ accordingly. There exist a constant
$C_{0}$, which is independent of $V$ and $f$, and a constant $N_{0}>0$
such that for every $\left|N\right|>N_{0}$ one has a decomposition
of the Hilbert space $\mathcal{H}_{N}^{r}\left(P\right)$ independent
on $V$:
\begin{equation}
\mathcal{H}_{N}^{r}\left(P\right)=\mathcal{H}_{0}\oplus\mathcal{H}_{1}\ldots\oplus\mathcal{H}_{n}\oplus\mathcal{H}_{n+1}\label{eq:decomp_space}
\end{equation}
such that, writing $\tau^{\left(k\right)}$ for the projection onto
the component $\mathcal{H}_{k}$ along the other components, 
\begin{enumerate}
\item For some constant $\epsilon>0$ and $C>0$ independent of $\hbar$,
we have
\[
\left|\dim\mathcal{H}_{k}-r(k,d)\cdot N^{d}\cdot Vol_{\omega}(M)\right|\le CN^{d-\epsilon}\quad\mbox{for }0\leq k\leq n
\]
where $r(k,d)=\binom{d+k-1}{d-1}$, while $\dim\mathcal{H}_{n+1}=\infty$,
\item $\|\tau^{\left(k\right)}\|<C_{0}$ for $0\le k\le n+1$, 
\item For some constant $\epsilon>0$ and $C>0$ independent of $\hbar$,
we have, if $k\neq l$, that 
\[
\|\tau^{\left(k\right)}\circ\hat{F}_{N}\circ\tau^{\left(l\right)}\|\le CN^{-\epsilon},
\]

\item For every $0\le k\le n+1$,
\begin{equation}
\|\tau^{\left(k\right)}\circ\hat{F}_{N}\circ\tau^{\left(k\right)}\|_{\mathcal{H}_{N}^{r}\left(P\right)}\le C_{0}\sup_{x\in M}\left(e^{D\left(x\right)}\left\Vert Df_{x}|_{E_{u}}\right\Vert _{\mathrm{min}}^{-k}\right),\label{eq:rhs_1}
\end{equation}
 
\item For every $0\le k\le n$ and $u\in\mathcal{H}_{k}$, 
\begin{equation}
\left\Vert \left(\tau^{\left(k\right)}\circ\hat{F}_{N}\right)u\right\Vert _{\mathcal{H}_{N}^{r}\left(P\right)}\ge C_{0}^{-1}\inf_{x\in M}\left(e^{D\left(x\right)}\left\Vert Df_{x}|_{E_{u}}\right\Vert _{\mathrm{max}}^{-k}\right)\left\Vert u\right\Vert _{\mathcal{H}_{N}^{r}\left(P\right)},\label{eq:rhs_2}
\end{equation}
\end{enumerate}
\end{thm}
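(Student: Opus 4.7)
The strategy is to reduce the theorem to the linear-model statement Proposition~\ref{prop:prequatum_op_for_hyp_linear} via a microlocal partition of unity adapted to the trapped set $K \subset T^*M$, and to absorb all non-linearities using the estimates assembled in Section~\ref{sec:Nonlinear-prequantum-maps}. First I would choose a finite atlas $(U_\alpha,\tau_\alpha)$ of Darboux charts with trivializations of $P$ in which, at the center point $x_\alpha \in U_\alpha$, the normal coordinates of Propositions \ref{prop:Normal-coordinates.}--\ref{prop:DF_ro} put the Lagrangian planes $E_u(x_\alpha)$ and $E_s(x_\alpha)$ along $\mathbb{R}^d \oplus \{0\}$ and $\{0\}\oplus\mathbb{R}^d$. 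I then refine this into an $\hbar$-dependent cover by disks of radius $\hbar^{1/2-\theta}$ centered at a maximal $\hbar^{1/2-\theta}$-separated net $\{x_j\}\subset M$, with a smooth partition of unity $\{\psi_j\}$ satisfying the conditions of Setting~I on page~\pageref{Setting-I}. The H\"older condition (\ref{eq:def_beta}) on $E_u,E_s$ and the choice $\theta < \beta/8$ in (\ref{eq:cond_theta}) ensure that, after pre- and post-composition with the normal-coordinate changes at $x_j$ and $f(x_j)$, the map $f$ is written as $g_j \circ B_j$, where $B_j$ is the symplectic linear hyperbolic map (\ref{eq:hyperbolic_f}) with expanding part $A_j = Df_{x_j}|_{E_u(x_j)}$, and $g_j$ lies in a family $\mathscr{G}_\hbar$ satisfying Setting~II. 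Using (\ref{eq:expression_op_locale}), the local expression of $\hat{F}_N$ on the $j$-th chart becomes, up to a scalar factor $e^{V(x_j)}$ and a pure phase, the composition $\mathcal{L}_{g_j} \circ \mathcal{L}_{B_j}$.

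Proposition~\ref{prop:prequatum_op_for_hyp_linear} supplies, for each $j$ and $0 \le k \le n$, mutually commuting projectors $t_{\hbar,j}^{(k)}$ and $\tilde{t}_{\hbar,j}$ on $\mathcal{H}_\hbar^r(\mathbb{R}^{2d})$ which commute with $\mathcal{L}_{B_j}$, whose images are finite-dimensional (of rank $r(k,d)\cdot\hbar^{-d}\cdot|\mathrm{supp}\,\psi_j|$ up to $O(\hbar^\epsilon)$, by Lemma~\ref{lm:trace_basic}), and on which $\mathcal{L}_{B_j}$ acts with operator norm between $C_0^{-1}\|A_j\|_{\max}^{-k}|\det A_j|^{-1/2}$ and $C_0\|A_j\|_{\min}^{-k}|\det A_j|^{-1/2}$. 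Transporting these projectors back to $P$ via the trivializations and summing against the partition of unity yields a first candidate
\[
\tilde{\tau}^{(k)} \;:=\; \sum_j \mathcal{T}_j\circ \mathcal{M}(\psi_j)\circ t_{\hbar,j}^{(k)}\circ\mathcal{M}(\psi_j)\circ \mathcal{T}_j^{-1},\qquad 0\le k\le n,
\]
where $\mathcal{T}_j$ denotes the isometric embedding from the local chart into $\mathcal{H}_N^r(P)$. The nonlinearity $g_j$ is made negligible on the image of $t_{\hbar,j}^{(k)}$ thanks to Lemma~\ref{lm:L_g_almost_identity} and Corollary~\ref{cor:L_g_almost_commutes_with_t^k}, while the truncation operator $\mathcal{Y}_\hbar$ and the pseudo-local estimate Lemma~\ref{lm:pseudo_local_property} control interactions across overlapping charts. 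Combining these tools, I would establish the four approximate relations
\[
\tilde{\tau}^{(k)}\circ \tilde{\tau}^{(l)} = \delta_{k,l}\,\tilde{\tau}^{(k)} + O(\hbar^\epsilon),\qquad
[\hat{F}_N,\tilde{\tau}^{(k)}] = O(\hbar^\epsilon),
\]
in operator norm on $\mathcal{H}_N^r(P)$, together with the sharp norm and lower-norm estimates matching (\ref{eq:rhs_1})--(\ref{eq:rhs_2}) in each local block.

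Once approximate projectors and approximate spectral bands are in hand, the uniform resolvent bound (\ref{eq:bound_resolvent-1}) along any circle of radius lying strictly between the local bands follows by a standard Neumann-series/Schur-complement argument. I would then define the true projectors
\[
\tau^{(k)} \;:=\; \frac{1}{2\pi i}\oint_{\gamma_k} (z-\hat{F}_N)^{-1}\,dz, \qquad 0\le k\le n,
\]
where $\gamma_k$ encircles the $k$-th band, setting $\tau^{(n+1)}:=\mathrm{Id}-\sum_{k=0}^n\tau^{(k)}$. The estimates $\tilde{\tau}^{(k)}=\tau^{(k)}+O(\hbar^\epsilon)$ transfer items (2)--(5) from the $\tilde{\tau}^{(k)}$ to the $\tau^{(k)}$, and the dimension count~(1) comes by taking the trace: by Lemma~\ref{lm:trace_basic} and the corollary after it, $\mathrm{Tr}(\tilde{\tau}^{(k)}) = \sum_j (2\pi\hbar)^{-d} r(k,d)\int \psi_j\,dx + O(\hbar^{-d+\epsilon}) = r(k,d)\,N^d\,\mathrm{Vol}_\omega(M) + O(N^{d-\epsilon})$.

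The main obstacle is the first bullet of the third paragraph: verifying quantitatively that, even though the local projectors $t_{\hbar,j}^{(k)}$ attached to different base points $x_j$ use \emph{different} (H\"older-continuous) unstable directions $E_u(x_j)$, their sum against the partition of unity still gives an approximate spectral projector for $\hat{F}_N$ with error only $O(\hbar^\epsilon)$. This is where the choice $\theta<\beta/8$ is delicate: one must balance three competing effects---the size $\hbar^{1/2-\theta}$ of the charts (which bounds the variation of $E_u$ by $\hbar^{\beta(1/2-\theta)}$, see condition~(G2)), the variation of $V$ on each chart (which must be $o(1)$ in a suitable sense to recover the sharp constant $e^{D(x)}$ in (\ref{eq:rhs_1})--(\ref{eq:rhs_2})), and the scale $\hbar^{1/2-2\theta}$ of the phase-space truncation $\mathcal{Y}_\hbar$ in~(\ref{def:operation_calY_hbar}). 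All other items of the theorem reduce, once this block-diagonalization is justified, to direct applications of Proposition~\ref{prop:prequatum_op_for_hyp_linear}(1)--(2).
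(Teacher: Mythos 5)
Your first three paragraphs track the paper's argument closely: the $\hbar$-dependent atlas of Proposition~\ref{prop:Local-charts}, the decomposition of $f_{j,i}$ into $a_{j,i}\circ g_{j,i}\circ B_{j,i}$, the component-wise projectors built from $t_\hbar^{(k)}$, the role of Propositions~\ref{pp1} and \ref{prop:key_proposition2} (which you recognize as the ``four approximate relations''), and the trace computation for Claim~(1) via Lemma~\ref{lm:trace_basic}. This is substantially the paper's route.

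However, the step where you pass from the approximate projectors $\tilde{\tau}^{(k)}$ to the true projectors $\tau^{(k)}$ contains a genuine gap. You propose to define
\[
\tau^{(k)}:=\frac{1}{2\pi i}\oint_{\gamma_k}(z-\hat{F}_N)^{-1}\,dz
\]
with $\gamma_k$ a contour encircling the $k$-th spectral band of $\hat{F}_N$. This fails for two reasons. First, it is circular: the uniform resolvent bound you invoke to close the contour argument is precisely part (2) of Theorem~\ref{thm:band_structure}, which the paper \emph{deduces from} Theorem~\ref{thm:More-detailled-description_of_spectrum} in Subsection~\ref{sub:Proof-of-Theorem_Nad_annuli}, not the other way around. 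Second, and more fundamentally, Theorem~\ref{thm:More-detailled-description_of_spectrum} makes \emph{no gap assumption}: as Remark~\ref{remarks_1.17}(1) points out, the annuli $\mathcal{A}_k$ may overlap, in which case there is no contour $\gamma_k$ separating the $k$-th band from its neighbors and your definition of $\tau^{(k)}$ is simply unavailable. Indeed the $\tau^{(k)}$ in the statement are not required to commute with $\hat{F}_N$ — only approximately so, via Claim~(3) — so they need not be, and in general are not, spectral projectors of $\hat{F}_N$.

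The paper's resolution is to apply the holomorphic functional calculus to the approximate projector $\check{\tau}_\hbar^{(k)}$ itself rather than to $\hat{F}_N$: since $\|\check{\tau}_\hbar^{(k)}\circ\check{\tau}_\hbar^{(k)}-\check{\tau}_\hbar^{(k)}\|\le C\hbar^\epsilon$, the spectral mapping theorem locates the spectrum of $\check{\tau}_\hbar^{(k)}$ near $\{0,1\}$, so one can take the spectral projector $\hat{\tau}_\hbar^{(k)}$ of $\check{\tau}_\hbar^{(k)}$ for the part near $1$ (this contour is around $1\in\mathbb{C}$, always available regardless of band overlaps), and then normalize $\tau_\hbar^{(k)}:=\hat{\tau}_\hbar^{(k)}\bigl(\hat{\tau}_\hbar^{(0)}+\cdots+\hat{\tau}_\hbar^{(n+1)}\bigr)^{-1}$ so that the $\tau_\hbar^{(k)}$ are true idempotents summing to $\mathrm{Id}$. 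You would need to replace your contour-integral step with this construction (or an equivalent one) to obtain a proof valid without a gap hypothesis.
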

\end{framed}

Now we show how to deduce Theorem \ref{thm:band_structure} from Theorem
\ref{thm:More-detailled-description_of_spectrum}. Let $m\geq1$ and
apply Theorem \ref{thm:More-detailled-description_of_spectrum} to
$f^{m}$ and $\hat{F}_{N}^{m}$. On the right hand side of (\ref{eq:rhs_1})
and (\ref{eq:rhs_2}) we have 
\[
r_{k,m}^{+}:=C_{0}\sup_{x\in M}\left(e^{D_{m}\left(x\right)}\left\Vert \left(Df_{/E_{u}}^{m}\left(x\right)\right)^{-1}\right\Vert ^{+k}\right)
\]
 and 
\[
r_{k,m}^{-}:=C_{0}^{-1}\cdot\inf_{x\in M}\left(e^{D_{m}\left(x\right)}\left\Vert Df_{/E_{u}}^{m}\left(x\right)\right\Vert ^{-k}\right),
\]
with $D_{m}\left(x\right):=\sum_{j=0}^{m-1}D\left(f^{-j}\left(x\right)\right)$.
From Eq.(\ref{eq:def_r+_r-}) one has $\lim_{m\rightarrow\infty}\left(r_{k,m}^{\pm}\right)^{1/m}=r_{k}^{\pm}$.
So let $\varepsilon>0$ and take $m$ large enough so that 
\begin{equation}
\forall k,\quad\left(r_{k,m}^{+}\right)^{1/m}<r_{k}^{+}+\varepsilon\quad\mbox{and}\quad\left(r_{k,m}^{-}\right)^{1/m}>r_{k}^{-}-\varepsilon.\label{eq:<and>}
\end{equation}

The following arguments using Neumann series for resolvents are very
standard. For $0\leq k\leq n+1$, let 
\[
A_{k,m}:=\tau^{\left(k\right)}\circ\hat{F}_{N}^{m}\circ\tau^{\left(k\right)}\quad:\mathcal{H}_{k}\rightarrow\mathcal{H}_{k},
\]
From Theorem \ref{thm:More-detailled-description_of_spectrum} we
have $\left\Vert A_{k,m}\right\Vert \leq r_{k,m}^{+}$ and for $k\leq n$
we have also $\left\Vert A_{k,m}^{-1}\right\Vert \leq\left(r_{k,m}^{-}\right)^{-1}$
(recall that $A_{k,m}$ is invertible and finite rank for $k\leq n$).
For convenience, if $k=n+1$, we define $r_{k,m}^{-}=0$.
\begin{lem}
Let $0\leq k\leq n+1$, and $z\in\mathbb{C}$ with $\left|z\right|<r_{k,m}^{-}$
or $\left|z\right|>r_{k,m}^{+}$. Then $\left(z-A_{k,m}\right)$ is
invertible and its inverse $R_{A_{k,m}}\left(z\right):=\left(z-A_{k,m}\right)^{-1}$,
the resolvent operator, satisfies 
\[
\left\Vert R_{A_{k,m}}\left(z\right)\right\Vert \leq\frac{1}{\mbox{dist}\left(\left|z\right|,\left[r_{k,m}^{-},r_{k,m}^{+}\right]\right)}
\]
\end{lem}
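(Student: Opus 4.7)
The plan is to treat the two regimes $|z|>r_{k,m}^{+}$ and $|z|<r_{k,m}^{-}$ separately, in each case writing $(z-A_{k,m})$ as a product of an invertible factor and a factor of the form $1-T$ with $\|T\|<1$, then invoking the Neumann series. The essential input is the pair of a priori bounds supplied just before the lemma, namely $\|A_{k,m}\|\le r_{k,m}^{+}$ for all $0\le k\le n+1$, and $\|A_{k,m}^{-1}\|\le (r_{k,m}^{-})^{-1}$ for $0\le k\le n$; the case $k=n+1$ is handled with $r_{k,m}^{-}=0$ by convention, in which case only the exterior regime has any content.

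First I will handle the exterior regime $|z|>r_{k,m}^{+}$. Factor
\[
z-A_{k,m}=z\bigl(\mathrm{Id}-z^{-1}A_{k,m}\bigr),
\]
and note that $\|z^{-1}A_{k,m}\|\le r_{k,m}^{+}/|z|<1$, so the Neumann series $\sum_{j\ge 0}(z^{-1}A_{k,m})^{j}$ converges in operator norm. Hence $(z-A_{k,m})^{-1}=z^{-1}\sum_{j\ge 0}(z^{-1}A_{k,m})^{j}$, and summing the geometric majorant gives
\[
\|R_{A_{k,m}}(z)\|\le |z|^{-1}\sum_{j\ge 0}\left(\frac{r_{k,m}^{+}}{|z|}\right)^{j}=\frac{1}{|z|-r_{k,m}^{+}}=\frac{1}{\mathrm{dist}(|z|,[r_{k,m}^{-},r_{k,m}^{+}])}.
\]

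Next I will handle the interior regime $|z|<r_{k,m}^{-}$, which is only nontrivial for $0\le k\le n$ since $A_{k,m}$ is invertible there. Factor instead
\[
z-A_{k,m}=-A_{k,m}\bigl(\mathrm{Id}-zA_{k,m}^{-1}\bigr),
\]
and observe that $\|zA_{k,m}^{-1}\|\le |z|/r_{k,m}^{-}<1$. The Neumann series then yields $(z-A_{k,m})^{-1}=-\sum_{j\ge 0}z^{j}A_{k,m}^{-(j+1)}$, and the geometric estimate gives
\[
\|R_{A_{k,m}}(z)\|\le \sum_{j\ge 0}|z|^{j}(r_{k,m}^{-})^{-(j+1)}=\frac{1}{r_{k,m}^{-}-|z|}=\frac{1}{\mathrm{dist}(|z|,[r_{k,m}^{-},r_{k,m}^{+}])}.
\]

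There is essentially no obstacle here: the lemma is a textbook resolvent bound and the only step that requires any care is the bookkeeping convention $r_{n+1,m}^{-}=0$ and the observation that, under this convention, the interior regime is vacuous for $k=n+1$ so no invertibility of $A_{n+1,m}$ needs to be asserted.
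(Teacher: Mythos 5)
Your proposal is correct and follows essentially the same route as the paper: in both regimes you factor out the dominant term, apply the Neumann series, and sum the geometric majorant, yielding the same bound $1/\mathrm{dist}(|z|,[r_{k,m}^{-},r_{k,m}^{+}])$. The only cosmetic difference is that the paper writes the intermediate bound in terms of $\|A_{k,m}\|$ and $\|A_{k,m}^{-1}\|^{-1}$ before comparing to $r_{k,m}^{\pm}$, while you substitute $r_{k,m}^{\pm}$ directly.
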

\begin{proof}
If $\left|z\right|>r_{k,m}^{+}\geq\left\Vert A_{k,m}\right\Vert $,
then $\left\Vert \frac{A_{k,m}}{z}\right\Vert <1$ and we can write
a convergent Neuman series for
\[
R_{A_{k,m}}\left(z\right)=\left(z-A_{k,m}\right)^{-1}=\frac{1}{z}\left(1-\frac{A_{k,m}}{z}\right)^{-1}
\]
giving
\[
\left\Vert R_{A_{k,m}}\left(z\right)\right\Vert \leq\frac{1}{\left|z\right|}\left(1-\frac{\left\Vert A_{k,m}\right\Vert }{\left|z\right|}\right)^{-1}=\frac{1}{\left(\left|z\right|-\left\Vert A_{k,m}\right\Vert \right)}\leq\frac{1}{\mbox{dist}\left(\left|z\right|,\left[r_{k,m}^{-},r_{k,m}^{+}\right]\right)}
\]
Similarly if $\left|z\right|<r_{k,m}^{-}\leq\left\Vert A_{k,m}^{-1}\right\Vert ^{-1}$
then we have $\left\Vert zA_{k,m}^{-1}\right\Vert <1$ and a convergent
Neuman series for
\[
R_{A_{k,m}}\left(z\right)=\left(z-A_{k,m}\right)^{-1}=-A_{k,m}^{-1}\left(1-zA_{k,m}^{-1}\right)^{-1}
\]
giving
\[
\left\Vert R_{A_{k,m}}\left(z\right)\right\Vert \leq\left\Vert A_{k,m}^{-1}\right\Vert \left(1-\left|z\right|\left\Vert A_{k,m}^{-1}\right\Vert \right)^{-1}=\frac{1}{\left(\left\Vert A_{k,m}^{-1}\right\Vert ^{-1}-\left|z\right|\right)}\leq\frac{1}{\mbox{dist}\left(\left|z\right|,\left[r_{k,m}^{-},r_{k,m}^{+}\right]\right)}
\]

\end{proof}
Thus the operator 
\begin{equation}
A_{m}:=A_{0,m}\oplus A_{1,m}\oplus\ldots\oplus A_{n+1,m}\quad:\mathcal{H}_{\hbar}^{r}\left(P\right)\rightarrow\mathcal{H}_{\hbar}^{r}\left(P\right)\label{eq:decomp_A_m}
\end{equation}
has a resolvent $R_{A_{m}}\left(z\right)$ which satisfies 
\begin{equation}
\left\Vert R_{A_{m}}\left(z\right)\right\Vert \leq\sum_{k=0}^{n+1}\frac{1}{\mbox{dist}\left(\left|z\right|,\left[r_{k,m}^{-},r_{k,m}^{+}\right]\right)}\label{eq:RA_z_bound}
\end{equation}

From Theorem \ref{thm:More-detailled-description_of_spectrum}(3),
the operator $\hat{F}_{N}^{m}$ can be written $\hat{F}_{N}^{m}=A_{m}+B_{m}$
with $\left\Vert B_{m}\right\Vert \leq C\hbar^{\epsilon}$ with $C$
and $\epsilon$ independent on $\hbar$ (but depends on $m$). We
use a standard perturbation argument \cite[p.311]{hislop_sigal_book_1996}
to show that if $z\in\mathbb{C}$ is such that $\left\Vert R_{A_{m}}\left(z\right)\right\Vert \left\Vert B_{m}\right\Vert <1$
then it is not in the spectrum of $\hat{F}_{N}^{m}$: $z\notin\sigma\left(\hat{F}_{N}^{m}\right)$.
For this we write 
\begin{eqnarray*}
z-\hat{F}_{N}^{m} & = & z-A_{m}-B_{m}\\
 & = & \left(z-A_{m}\right)\left(1-\left(z-A_{m}\right)^{-1}B_{m}\right)
\end{eqnarray*}
hence

\[
\left(z-\hat{F}_{N}^{m}\right)^{-1}=\left(1-\left(z-A_{m}\right)^{-1}B_{m}\right)^{-1}\left(z-A_{m}\right)^{-1}
\]
and using Neumann series we deduce: 
\[
\left\Vert R_{\hat{F}_{N}^{m}}\left(z\right)\right\Vert \leq\left(1-\left\Vert R_{A_{m}}\left(z\right)\right\Vert \left\Vert B_{m}\right\Vert \right)^{-1}\left\Vert R_{A_{m}}\left(z\right)\right\Vert 
\]
hence $z$ is in the resolvent set of $\hat{F}_{N}^{m}$, i.e. $z\notin\sigma\left(\hat{F}_{N}^{m}\right)$.

From Eq.(\ref{eq:RA_z_bound}) and $\left\Vert B_{m}\right\Vert \leq C\hbar^{\epsilon}$,
we see that the condition $\left\Vert R_{A_{m}}\left(z\right)\right\Vert \left\Vert B_{m}\right\Vert <1$
is satisfied if for every $k\in\left\{ 0,n+1\right\} $
\[
\mbox{dist}\left(\left|z\right|,\left[r_{k,m}^{-},r_{k,m}^{+}\right]\right)>\left(n+1\right)C\hbar^{\epsilon}.
\]
In that case we have $\left\Vert R_{\hat{F}_{N}^{m}}\left(z\right)\right\Vert <C$
with $C$ independent on $N$. In other terms, replacing $z$ by $z^{m}$
and taking the power $1/m$ of the previous estimates, we have that
if $z\in\mathbb{C}$ and
\begin{equation}
\left(r_{k+1,m}^{+}+\left(n+1\right)C\hbar^{\epsilon}\right)^{1/m}<\left|z\right|<\left(r_{k,m}^{-}+\left(n+1\right)C\hbar^{\epsilon}\right)^{1/m}\label{eq:intervall_for_z}
\end{equation}
for every $0\leq k\leq n+1$ then $\left\Vert R_{\hat{F}_{N}^{m}}\left(z^{m}\right)\right\Vert <C$
and $z\notin\sigma\left(\hat{F}_{N}\right)$.

Considering (\ref{eq:<and>}), take $N_{\varepsilon}$ large enough
such that for every $\left|N\right|=\frac{1}{2\pi\hbar}>N_{\varepsilon}$
we have $\left(r_{k+1,m}^{+}+\left(n+1\right)C\hbar^{\epsilon}\right)^{1/m}<r_{k+1}^{+}+\varepsilon$
and $\left(r_{k,m}^{-}+\left(n+1\right)C\hbar^{\epsilon}\right)^{1/m}>r_{k}^{-}-\varepsilon$.
Then if $z\in\mathbb{C}$ is such that $r_{k+1}^{+}+\varepsilon<\left|z\right|<r_{k}^{-}-\varepsilon$,
we have (\ref{eq:intervall_for_z}) and therefore 
\begin{equation}
\left\Vert \left(z^{m}-\hat{F}_{N}^{m}\right)^{-1}\right\Vert <C\label{eq:bound_resolvent_F^m}
\end{equation}
and $z\notin\sigma\left(\hat{F}_{N}\right)$ from above. We have obtained
the results presented in Theorem \ref{thm:band_structure} except
for the bound on the resolvent (\ref{eq:bound_resolvent}) that we
derive now from (\ref{eq:bound_resolvent_F^m}).

From Theorem \ref{thm:More-detailled-description_of_spectrum} we
have that $\left\Vert \hat{F}_{N}\right\Vert \leq C$ is bounded independent
on $N$. For $z\in\mathbb{C}$ in the resolvent set we have the relation%
\footnote{\begin{proof}
For complex numbers $a,F\in\mathbb{C}$ we have 
\[
\left(z-F\right)\left(\sum_{r=0}^{m-1}z^{m-1-r}F^{r}\right)=\sum_{r=0}^{m-1}z^{m-r}F^{r}-\sum_{r=0}^{m-1}z^{m-1-r}F^{r+1}=\sum_{r=0}^{m-1}z^{m-r}F^{r}-\sum_{r=1}^{m}z^{m-r}F^{r}=z^{m}-F^{m}
\]
\end{proof}
}
\[
\left(z-\hat{F}_{N}\right)^{-1}=\left(\sum_{r=0}^{m-1}z^{m-1-r}\hat{F}_{N}^{r}\right)\left(z^{m}-\hat{F}_{N}^{m}\right)^{-1}
\]
hence we deduce that
\begin{eqnarray*}
\left\Vert \left(z-\hat{F}_{N}\right)^{-1}\right\Vert  & \leq & \left(\sum_{r=0}^{m-1}\left|z\right|^{m-1-r}C^{r}\right)\left\Vert \left(z^{m}-\hat{F}_{N}^{m}\right)^{-1}\right\Vert \\
 & \underset{(\ref{eq:bound_resolvent_F^m})}{\leq} & C'
\end{eqnarray*}
with some constant $C'$ independent on $N$. We have obtained (\ref{eq:bound_resolvent}).

Suppose that $r_{1}^{+}<r_{0}^{-}$. We have defined in (\ref{eq:def_Pi_hbar})
by $\Pi_{\hbar}$ the spectral projector on the external band of $\hat{F}_{N}$.
The projector $\tau^{\left(0\right)}$ introduced in Theorem \ref{thm:More-detailled-description_of_spectrum}
is the projector on the component $A_{0,m}$ of the decomposition
(\ref{eq:decomp_A_m}) of the operator $A_{m}$. We aim to finish
this subsection by showing that
\begin{equation}
\left\Vert \Pi_{\hbar}-\tau^{\left(0\right)}\right\Vert \leq C\hbar^{\epsilon}\label{eq:diff_projectors}
\end{equation}
with some constant $C>0,\epsilon>0$ independent on $\hbar=1/\left(2\pi N\right)$.
Let $m$ as above and let $\gamma$ be a clockwise path a circle of
radius $\left(r_{1}^{+}\right)^{m}<\left|z\right|<\left(r_{0}^{-}\right)^{m}$.
By Cauchy formula we have
\[
\Pi_{\hbar}=\frac{1}{2\pi i}\oint_{\gamma}R_{\hat{F}_{N}^{m}}\left(z\right)dz,\qquad R_{\hat{F}_{N}^{m}}\left(z\right):=\left(z-\hat{F}_{N}^{m}\right)^{-1},
\]
\[
\tau^{\left(0\right)}=\frac{1}{2\pi i}\oint_{\gamma}R_{A_{m}}\left(z\right)dz,\qquad R_{A_{m}}\left(z\right):=\left(z-A_{m}\right)^{-1}.
\]
We have written above that $\hat{F}_{N}^{m}=A_{m}+B_{m}$ with $\left\Vert B_{m}\right\Vert \leq C\hbar^{\epsilon}$.
For any $z\in\gamma$ we have shown that the resolvent are bounded:
$\left\Vert R_{\hat{F}_{N}^{m}}\left(z\right)\right\Vert \leq C$,
$\left\Vert R_{A_{m}}\left(z\right)\right\Vert \leq C$. From the
general relation
\[
R_{\hat{F}_{N}^{m}}\left(z\right)-R_{A_{m}}\left(z\right)=\frac{1}{2}\left(R_{\hat{F}_{N}^{m}}\left(z\right)B_{m}R_{A_{m}}\left(z\right)+R_{A_{m}}\left(z\right)B_{m}R_{\hat{F}_{N}^{m}}\left(z\right)\right)
\]
we deduce that $\left\Vert R_{\hat{F}_{N}^{m}}\left(z\right)-R_{A_{m}}\left(z\right)\right\Vert \leq C'\hbar^{\epsilon}$
and then from Cauchy formula above that $\left\Vert \Pi_{\hbar}-\tau^{\left(0\right)}\right\Vert \leq C\hbar^{\epsilon}$.

\subsection{\label{ss:local_charts}Local charts on $M$ and local trivialization
of the bundle $P\rightarrow M$}

In this section and the next ones we give the proof of the Theorem
\ref{thm:band_structure-of_Laplacian}. We henceforth consider the
setting assumed in Section \ref{sec:Introduction-and-results}. In
particular $\lambda>1$ is the constant in the condition (\ref{eq:def_dynamics})
in the definition (Definition \ref{def:Anosov_diffeo}) that $f$
is an Anosov diffeomorphism. Note that, by replacing $f$ by its iterate
if necessary, we may and do suppose that $\lambda$ is a large number.
Below we write $C_{0}$ for positive constants independent of $f$,
$V$ and $\hbar$ and write $C$ for those independent of $\hbar$
but may (or may not) dependent on $f$ and $V$. Also we assume (\ref{eq:choice_of_r_3})
for the choice of $r$.

As in (\ref{eq:cond_theta}), we fix a constant $0<\theta<\beta/8$
with $0<\beta<1$ being the Hölder exponent of the stable and unstable
sub-bundle given in (\ref{eq:def_beta}). Below we take an atlas on
$M$ depending on the semiclassical parameter $\hbar=\frac{1}{2\pi N}>0$
so that it consists of charts of diameter of order $\hbar^{\frac{1}{2}-\theta}$.
We consider $\mathbb{R}^{2d}$ as a linear symplectic space with coordinates
$x=\left(q,p\right)=\left(q^{1},\ldots q^{d},p^{1},\ldots p^{d}\right)$
and symplectic form $\omega=\sum_{i=1}^{d}dq^{i}\wedge dp^{i}$. The
open ball of radius $c>0$ is denoted by $\mathbb{D}(c):=\{x\in\real^{2d}\mid|x|<c\}$.
The following Proposition is illustrated in Figure \ref{fig:cartes_kappa}.

\begin{prop}
\textbf{\label{prop:Local-charts}``Local chart and trivialization''.
}For each $\hbar=\frac{1}{2\pi N}>0$, there exist a set of distinct
points 
\[
\mathscr{P}_{\hbar}=\left\{ m_{i}\in M\mid\,1\le i\le I_{\hbar}\right\} 
\]
and a coordinate map associated to each point $m_{i}\in\mathscr{P}_{\hbar}$,
\[
\kappa_{i}=\kappa_{i,\hbar}:\mathbb{D}\left(c\right)\subset\mathbb{R}^{2d}\rightarrow M,\quad1\le i\le I_{\hbar}
\]
with $c>0$ a constant independent of $\hbar$, so that the following
conditions hold:
\begin{enumerate}
\item $\kappa_{i}\left(0\right)=m_{i}$. 
\item The differential of $\kappa_{i}$ at the origin $0$ maps the subspaces
$\mathbb{R}^{d}\oplus\left\{ 0\right\} $ and $\left\{ 0\right\} \oplus\mathbb{R}^{d}$
(or, the $q$- and $p$- axis) isometrically onto the unstable and
stable subspace respectively: 
\[
\left(D\kappa_{i}\right)_{0}\left(\mathbb{R}^{d}\oplus\left\{ 0\right\} \right)=E_{u}\left(m_{i}\right),\quad\left(D\kappa_{i}\right)_{0}\left(\left\{ 0\right\} \oplus\mathbb{R}^{d}\right)=E_{s}\left(m_{i}\right).
\]
Further, the map $\kappa_{i}\circ(D\kappa_{i})_{0}^{-1}$ is not far
from the exponential map in the sense that 
\[
\|\exp_{m_{i}}^{-1}\circ\kappa_{i}\circ(D\kappa_{i})_{0}^{-1}:\mathbb{D}(m_{i},c)\to T_{m_{i}}M\|_{C^{k}}\le C_{k}
\]
with $C_{k}$ a constant independent of $\hbar$ nor $1\le i\le I_{\hbar}$,
where $\mathbb{D}(m_{i},c)$ denotes the disk in $T_{m_{i}}M$ with
radius $c$ and center at the origin. 
\item The open subsets $U_{i}:=\kappa_{i}\left(\mathbb{D}\left(3\hbar^{\frac{1}{2}-\theta}\right)\right)\subset M$
for $1\le i\le I_{\hbar}$ cover the manifold $M$. The cardinality
$I_{\hbar}$ of the set $\mathscr{P}_{\hbar}$ is bounded by $C_{0}\cdot\hbar^{-d\left(1-2\theta\right)}$
and we have 
\begin{equation}
\max_{1\leq i\leq I_{\hbar}}\sharp\left\{ 1\leq j\leq I_{\hbar}\mid U_{i}\cap U_{j}\neq\emptyset\right\} \leq C_{0}\label{eq:bound_on_i_sim_j}
\end{equation}
with $C_{0}$ a constant independent of $\hbar$.
\item For every $1\le i\le I_{\hbar}$, $\kappa_{i}^{*}\left(\omega\right)=\sum_{i}dq^{i}\wedge dp^{i}$
on $U_{i}$ and with an appropriate choice of a section $\tau_{i}:U_{i}\rightarrow P$,
the statement of Proposition \ref{prop:Normal-coordinates.} holds
true. 
\item If $U_{i}\cap U_{j}\neq\emptyset$, we denote the coordinate change
transformation by $\kappa_{j,i}:=\kappa_{j}^{-1}\circ\kappa_{i}:\mathbb{D}\left(c\right)\rightarrow\mathbb{R}^{2d}$.
Then there exists symplectic and isometric affine map $A_{j,i}:\mathbb{R}^{2d}\rightarrow\mathbb{R}^{2d}$
that belongs to $\mathcal{A}$ (see Definition \ref{def:A_tilde})
such that $g_{j,i}:=A_{j,i}\circ\kappa_{j,i}$ satisfies
\[
g_{j,i}\left(0\right)=0,\quad\left\Vert Dg_{j,i}(0)-\mathrm{Id}\right\Vert _{C^{1}}\leq C_{1}\cdot\hbar^{\beta\left(\frac{1}{2}-\theta\right)}\quad\mbox{and\quad}\left\Vert g_{j,i}\right\Vert _{C^{s}}<C_{s}\quad\mbox{\mbox{for \ensuremath{s\ge2}}}
\]
where $C_{s}$ for $s\ge1$ are constant independent of $\hbar$ and
$1\le i,j\le I_{\hbar}$. 
\item There exists a family of $C^{\infty}$functions $\{\psi_{i}:\real^{2d}\to[0,1]\}_{i=1}^{I_{\hbar}}$
which is supported on the disk $\mathbb{D}(\hbar^{1/2-\theta})$ and
gives a partition of unity on $M$:
\begin{equation}
\sum_{i=1}^{I_{\hbar}}\psi_{i}\circ\kappa_{i}^{-1}\equiv1\mbox{ on }M.\label{eq:partition_of_unity}
\end{equation}
The set of functions $\psi_{i}$ satisfies the conditions (C1) and
(C2) in Subsection \ref{ss:trumcation}. 
\end{enumerate}
\end{prop}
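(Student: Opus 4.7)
The plan is to build the atlas in three layers: (i) a maximal $\hbar^{1/2-\theta}$-separated set of base points, (ii) Darboux-type symplectic charts at each base point aligned with the hyperbolic splitting, and (iii) a compatible local trivialization of $P\to M$, and then verify the H\"older-type closeness of transition maps.

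First I would choose $\mathscr{P}_\hbar=\{m_i\}$ as a maximal $c_0\hbar^{1/2-\theta}$-separated set in $M$ for a small constant $c_0>0$. Standard volume arguments on the compact manifold $M$ yield the cardinality bound $I_\hbar\le C_0\hbar^{-d(1-2\theta)}$ and the uniform multiplicity bound (\ref{eq:bound_on_i_sim_j}), since two charts of diameter $\sim\hbar^{1/2-\theta}$ can intersect only if their centers lie within distance $\le C\hbar^{1/2-\theta}$ of each other, and the separation forces at most $C_0$ such neighbors.

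Next, at each $m_i$ I would build $\kappa_i$ as follows. Since $E_u(m_i)$ and $E_s(m_i)$ are transverse Lagrangian subspaces of $(T_{m_i}M,\omega_{m_i})$, Darboux's theorem allows me to pick a symplectic linear isomorphism $L_i:\real^{2d}\to T_{m_i}M$ sending $\real^d\oplus\{0\}$ isometrically onto $E_u(m_i)$ and $\{0\}\oplus\real^d$ isometrically onto $E_s(m_i)$; the ``isometrically'' uses the compatible metric $g$ from Subsection 3.2.1 and the Gram-Schmidt procedure within each Lagrangian. Following the argument in the proof of Proposition 2.28, I would then start from $\exp_{m_i}\circ L_i$, which is symplectic only to first order at the origin, and correct it by the classical Moser-Weinstein isotopy (as in Darboux's theorem on symplectic manifolds) to get a genuinely symplectic $\kappa_i:\mathbb{D}(c)\to M$ satisfying (1), (2) and $\kappa_i^*\omega=\sum dq^i\wedge dp^i$. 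The Moser correction is smooth, coincides with the identity at the origin to first order, and enjoys uniform $C^k$-bounds because $M$ and its geometric data are fixed and compact; this gives the $C^k$-closeness to $\exp_{m_i}\circ(D\kappa_i)_0^{-1}$ in (2). Item (4) is then obtained exactly as in Proposition 2.28: once $\kappa_i^*\omega$ is standard, the primitive $\eta_\alpha$ on the chart differs from $\frac{1}{2}(qdp-pdq)$ by a closed (hence exact) one-form, which can be absorbed by a gauge transform $\tau_i\to e^{i\chi}\tau_i$.

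The central step is item (5), which is the heart of the proposition. Pick $i,j$ with $U_i\cap U_j\neq\emptyset$, so $d(m_i,m_j)\le C\hbar^{1/2-\theta}$. Parallel transport (or equivalently, comparison via the exponential map) identifies $T_{m_i}M$ with $T_{m_j}M$ up to a symplectic isometry, and $\kappa_{j,i}=\kappa_j^{-1}\circ\kappa_i$ is $O(\hbar^{1/2-\theta})$-close at the origin to such a symplectic isometric linear map by construction of $\kappa_i,\kappa_j$. Define $A_{j,i}\in\mathcal{A}$ to be the unique affine symplectic isometry (translation followed by a block-diagonal orthogonal transformation respecting the splitting $\real^d\oplus\real^d$) which sends the image of the origin and the axes of $\kappa_i$ to those of $\kappa_j$ \emph{as prescribed by the hyperbolic splitting at $m_i$ versus $m_j$}. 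Then $g_{j,i}=A_{j,i}\circ\kappa_{j,i}$ fixes the origin, and the discrepancy $\|Dg_{j,i}(0)-\mathrm{Id}\|$ measures how much the splitting $E_u\oplus E_s$ has rotated between $m_i$ and $m_j$. Since this splitting is $\beta$-H\"older continuous (Remark 1.3) and $d(m_i,m_j)\le C\hbar^{1/2-\theta}$, we get
\[
\|Dg_{j,i}(0)-\mathrm{Id}\|\le C_1\bigl(\hbar^{1/2-\theta}\bigr)^{\beta}=C_1\hbar^{\beta(1/2-\theta)},
\]
which is precisely the claimed bound. The uniform $C^s$-bounds for $s\ge2$ follow from the uniform $C^k$-bounds on $\kappa_i,\kappa_j$ and from the fact that $A_{j,i}$ is affine with bounded linear part. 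The main obstacle in the whole proof is really confined to this step: the H\"older exponent $\beta$ on the hyperbolic splitting is the only available regularity, and the affine isometry $A_{j,i}$ must be carefully tailored in $\mathcal{A}$ so that all the ``smooth'' discrepancy is pushed into $A_{j,i}$ while only the (small) H\"older part remains in $Dg_{j,i}(0)-\mathrm{Id}$.

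Finally, for item (6), I would fix a smooth bump $\chi_0\in C^\infty_c(\real^{2d})$ with $\chi_0\equiv1$ on $\mathbb{D}(1/2)$ and $\supp\chi_0\subset\mathbb{D}(1)$, set $\tilde\psi_i(x)=\chi_0(\hbar^{-(1/2-\theta)}x)$, transport by $\kappa_i$, and normalize: $\psi_i\circ\kappa_i^{-1}=\tilde\psi_i\circ\kappa_i^{-1}/\sum_j \tilde\psi_j\circ\kappa_j^{-1}$. The denominator is bounded below by a positive constant (by the covering property of $\{U_i\}$ and the multiplicity bound (\ref{eq:bound_on_i_sim_j})), and the derivatives of $\tilde\psi_i$ scale as $\hbar^{-(1/2-\theta)|\alpha|}$, giving the conditions (C1)--(C2) of Subsection 6.1.
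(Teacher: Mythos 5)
Your proposal follows the same outline as the paper's (very terse) proof: separated net of base points, $\exp_{m_i}$ composed with a linear alignment and a Moser/Darboux correction, the H\"older exponent of $E_u\oplus E_s$ driving the $\hbar^{\beta(1/2-\theta)}$ bound on $Dg_{j,i}(0)-\mathrm{Id}$, and a normalized bump-function partition of unity. You fill in considerably more detail than the paper, especially for items (3), (5), and (6), and the explanation of how the separation forces the multiplicity bound is exactly what the paper leaves implicit.

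One step you should not state so confidently is the construction of $L_i$ as a \emph{symplectic} linear map that is simultaneously an \emph{isometry} on each factor. Given orthonormal bases $\{e_i\}$ of $E_u(m_i)$ and $\{f_j\}$ of $E_s(m_i)$ with respect to the fixed compatible metric $g$, the matrix $\bigl(\omega(e_i,f_j)\bigr)$ has singular values $\sigma_k\le 1$ (since $\omega(e,f)=g(e,-Jf)\le\|e\|\,\|f\|$ by Cauchy--Schwarz), with all $\sigma_k=1$ if and only if $E_u(m_i)\perp_g E_s(m_i)$. Since the Anosov splitting is not $g$-orthogonal in general, Gram--Schmidt within each Lagrangian cannot be adjusted to make the pairing the identity, and so a linear map satisfying both ``symplectic'' and ``piecewise isometric onto $E_u,E_s$'' does not exist. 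In other words, items (2) and (4) of the Proposition cannot be realized literally at the same point unless one uses a metric adapted to the splitting, and the paper's own sketch is silent on this. What is actually used downstream (in Propositions~\ref{prop:DF_ro} and \ref{prop:prequatum_op_for_hyp_linear}) is only that $(D\kappa_i)_0$ carries the coordinate axes \emph{onto} $E_u(m_i)$ and $E_s(m_i)$ symplectically, with a distortion of the induced norm that is uniformly bounded over $M$ by compactness; such a bounded distortion washes out of the asymptotic quantities $r_k^{\pm}$ and the constants $C_0$. So the fix is simply to weaken ``isometrically'' to ``onto, with uniformly bounded distortion''; everything else in your argument goes through. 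This imprecision is inherited from the paper, not introduced by you, but since you explicitly invoke Gram--Schmidt as if it resolves it, the gap is worth flagging.
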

\begin{figure}
\centering{}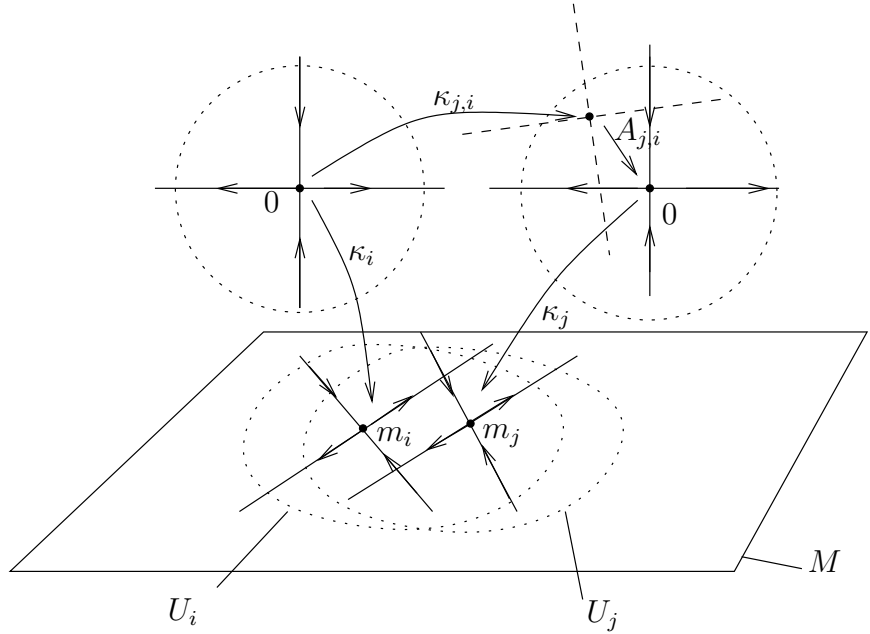\caption{\label{fig:cartes_kappa}Illustration of the local charts $U_{i}$
and maps $\kappa_{i}$ defined in Proposition \ref{prop:Local-charts}.}
\end{figure}

\begin{rem}
Since the unstable and stable vector sub-bundles, $E_{u}$ and $E_{s}$
may be non-trivial in general, we need to put the affine isometries
$A_{j,i}\in\mathcal{A}$ in the condition (5) above.\end{rem}
\begin{proof}
For each point $m\in M,$ we first define $\kappa_{m}$ as the composition
of the exponential mapping (in Riemannian geometry) $\exp_{m}:T_{m}M\to M$
and a linear map $\real^{2d}\to T_{m}M$ so that the condition 1 (with
$\kappa_{i}=\kappa_{m}$) holds true. Then, using Darboux theorem,
we can deform such $\kappa_{m}$ into a symplectic map to ensure condition
4 with keeping the condition 1. (See Lemma 3.14 in \cite[p.94]{mac_duff_98}
and its proof). We may then take a section $\tau$ as in the proof
of Proposition \ref{prop:Normal-coordinates.} so that the condition
4 (with $\kappa_{i}=\kappa_{m}$) holds. It is then clear that, if
we take the points in $\mathscr{P}_{\hbar}$ appropriately, the conditions
1 to 3 hold true with setting $\kappa_{i}:=\kappa_{m_{i}}$. The condition
5 and 6 are also obvious from this construction. 
\end{proof}
In the following subsections, we fix the set $\mathscr{P_{\hbar}}$,
the coordinate maps $\kappa_{i}$, the isometric affine maps $A_{j,i}\in\mathcal{A}$
and the functions $\psi_{i}$ taken in Proposition \ref{prop:Local-charts}
above.

\subsection{\label{sub:The-prequantum-transfer_decomposed}The prequantum transfer
operator decomposed on local charts}

To proceed, we express the transfer operator $\hat{F}_{\hbar}$ as
the totality of operators between local charts. First we discuss about
an expression of an equivariant section $u\in C_{N}^{\infty}(P)$
as a set of functions on local charts.
\begin{defn}
Let 
\[
\mathcal{E}_{\hbar}:=\bigoplus_{i=1}^{I_{\hbar}}C_{0}^{\infty}(\mathbb{D}(\hbar^{1/2-\theta})).
\]
Let $\boldI_{\hbar}:C_{N}^{\infty}\left(P\right)\to\mathcal{E}$ be
the operator that associates to each equivariant function $u\in C_{N}^{\infty}\left(P\right)$
a set of functions $\boldI_{\hbar}(u)=\left(u_{i}\right)_{i=1}^{I_{\hbar}}\in\mathcal{E}_{\hbar}$
on local charts:
\begin{equation}
\boldI_{\hbar}:\begin{cases}
C_{N}^{\infty}\left(P\right) & \rightarrow\mathcal{E}_{\hbar}\\
u & \rightarrow u_{i}\left(x\right)=\psi_{i}\left(x\right)\cdot u\left(\tau_{i}\left(\kappa_{i}\left(x\right)\right)\right)\quad\mbox{for }1\le i\le I_{\hbar}.
\end{cases}\label{eq:local_data_ui}
\end{equation}

\end{defn}
The inverse operation is given as follows%
\footnote{Beware that $\mathbf{I}_{\hbar}^{*}$ is not the $L^{2}$ adjoint
of $\mathbf{I}_{\hbar}$ here.%
}.
\begin{prop}
Let $\boldI_{\hbar}^{*}:\bigoplus_{i=1}^{I_{\hbar}}\mathcal{S}(\real^{2d})\to C_{N}^{\infty}\left(P\right)$
be the operator defined by
\begin{equation}
\left(\boldI_{\hbar}^{*}\left((u_{i})_{i=1}^{I_{\hbar}}\right)\right)(p)=\sum_{i=1}^{I_{\hbar}}e^{iN\cdot\alpha_{i}(p)}\cdot\chi_{\hbar}(x)\cdot u_{i}(x)\label{eq:expression_of_I*}
\end{equation}
where $\chi_{\hbar}$ is the function defined in (\ref{eq:def_chi_hbar}),
and $x=\kappa_{i}^{-1}(\pi(p))$ and $\alpha_{i}(p)$ is the real
number such that $p=e^{i\alpha_{i}(p)}\cdot\tau_{i}(\pi(p))$. This
operator reconstructs $u\in C_{N}^{\infty}\left(P\right)$ from its
local data $u_{i}=\left(\boldI_{\hbar}(u)\right)_{i}$ :
\begin{equation}
\boldI_{\hbar}^{*}\circ\boldI_{\hbar}=\mathrm{Id}_{C_{N}^{\infty}\left(P\right)}.\label{eq:I*_I_is_Identity}
\end{equation}
Consequently, $\boldI_{\hbar}\circ\boldI_{\hbar}^{*}:\mathcal{E}_{\hbar}\rightarrow\mathcal{E}_{\hbar}$
is a projection onto the image of $\boldI_{\hbar}$.\end{prop}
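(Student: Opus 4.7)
The plan is a direct verification using equivariance, the support compatibility between $\chi_\hbar$ and $\psi_i$, and the partition of unity (\ref{eq:partition_of_unity}). There is no deep obstacle; the main care is bookkeeping with the section-dependent phases $\alpha_i(p)$.

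First I would check that $\boldI_\hbar^*$ genuinely maps into $C_N^\infty(P)$. For $p\in P$ and $\theta\in\mathbb{R}$, the defining relation $p=e^{i\alpha_i(p)}\tau_i(\pi(p))$ gives $\alpha_i(e^{i\theta}p)=\alpha_i(p)+\theta\pmod{2\pi}$, while $\pi$ and hence the local coordinate $x=\kappa_i^{-1}(\pi(p))$ are unchanged. Thus each summand in (\ref{eq:expression_of_I*}) is multiplied by $e^{iN\theta}$, giving the required equivariance. Also, the cutoff $\chi_\hbar(x)$ is supported in $|x|\le 2\hbar^{1/2-\theta}$, hence inside $\mathbb D(3\hbar^{1/2-\theta})$, so $\kappa_i^{-1}(\pi(p))$ is well defined wherever the summand is nonzero.

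Next I would compute $\boldI_\hbar^*\circ\boldI_\hbar(u)$ for $u\in C_N^\infty(P)$. Unwinding definitions, for $p\in P$ with $x_i:=\kappa_i^{-1}(\pi(p))$,
\begin{equation*}
\bigl(\boldI_\hbar^*\circ\boldI_\hbar(u)\bigr)(p)
=\sum_{i=1}^{I_\hbar} e^{iN\alpha_i(p)}\,\chi_\hbar(x_i)\,\psi_i(x_i)\,u\bigl(\tau_i(\kappa_i(x_i))\bigr).
\end{equation*}
Because $p=e^{i\alpha_i(p)}\tau_i(\pi(p))$ and $u\in C_N^\infty(P)$, equivariance gives $u(\tau_i(\pi(p)))=e^{-iN\alpha_i(p)}u(p)$. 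The phase factors cancel and the sum reduces to
\begin{equation*}
\Bigl(\sum_{i=1}^{I_\hbar}\chi_\hbar(x_i)\,\psi_i(x_i)\Bigr)\,u(p).
\end{equation*}

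Then I would observe that on the support $\mathbb D(\hbar^{1/2-\theta})$ of $\psi_i$ the cutoff $\chi_\hbar$, defined by $\chi_\hbar(x)=\chi(\hbar^{-1/2+\theta}x/2)$ with $\chi$ satisfying (\ref{eq:def_chi}), is identically $1$; hence $\chi_\hbar\cdot\psi_i=\psi_i$. Applying the partition of unity (\ref{eq:partition_of_unity}), $\sum_i\psi_i(\kappa_i^{-1}(\pi(p)))=1$ for every $p\in P$, and the displayed expression collapses to $u(p)$. This proves $\boldI_\hbar^*\circ\boldI_\hbar=\mathrm{Id}$.

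Finally, the consequence is purely algebraic: writing $\Pi:=\boldI_\hbar\circ\boldI_\hbar^*$ on $\mathcal{E}_\hbar$, one has $\Pi^2=\boldI_\hbar\circ(\boldI_\hbar^*\circ\boldI_\hbar)\circ\boldI_\hbar^*=\boldI_\hbar\circ\boldI_\hbar^*=\Pi$, so $\Pi$ is a projector. Its image is contained in $\mathrm{Im}(\boldI_\hbar)$ by definition, and it contains $\mathrm{Im}(\boldI_\hbar)$ because for any $u$, $\boldI_\hbar(u)=\boldI_\hbar(\boldI_\hbar^*\boldI_\hbar u)=\Pi(\boldI_\hbar u)$. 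Hence $\mathrm{Im}(\Pi)=\mathrm{Im}(\boldI_\hbar)$, completing the proof.
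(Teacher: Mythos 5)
Your proof is correct and follows essentially the same route as the paper's: unwind the definitions, cancel the phase $e^{iN\alpha_i(p)}$ against the equivariance of $u$, use $\chi_\hbar\cdot\psi_i=\psi_i$ and the partition of unity, and then the projector property is a one-line algebraic consequence. The only addition you make is an explicit check that $\boldI_\hbar^*$ lands in $C_N^\infty(P)$, which the paper leaves implicit but is worth stating.
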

\begin{proof}
Notice that
\begin{equation}
\chi_{\hbar}\cdot\psi_{i}=\psi_{i}\label{eq:chi_hbar_psi_i}
\end{equation}
Let $w:=\left(\boldI_{\hbar}^{*}\circ\boldI_{\hbar}\right)\left(v\right)$.
From the expressions of $\mathbf{I}_{\hbar}$ and $\mathbf{I}_{\hbar}^{*}$
and equivariance of $v$, we compute
\[
w\left(p\right)=\sum_{i=1}^{I_{\hbar}}e^{iN\alpha_{i}(p)}\left(\chi_{\hbar}(x)\cdot\psi_{i}\left(x\right)\cdot v\left(\tau_{i}\left(\kappa_{i}\left(x\right)\right)\right)\right)=\sum_{i=1}^{I_{\hbar}}\psi_{i}\left(x\right)v\left(p\right)=v\left(p\right).
\]
Finally $\boldI_{\hbar}\circ\boldI_{\hbar}^{*}$ is a projector since
$\left(\boldI_{\hbar}\circ\boldI_{\hbar}^{*}\right)^{2}=\boldI_{\hbar}\circ(\boldI_{\hbar}^{*}\circ\boldI_{\hbar})\circ\boldI_{\hbar}^{*}=\boldI_{\hbar}\circ\boldI_{\hbar}^{*}$.\end{proof}
\begin{defn}
We define the \emph{lift of the prequantum transfer operator} $\hat{F}_{\hbar}$
with respect to $\boldI_{\hbar}$ as 
\begin{equation}
\boldF_{\hbar}:=\boldI_{\hbar}\circ\hat{F}_{N}\circ\boldI_{\hbar}^{*}\;:\;\bigoplus_{i=1}^{I_{\hbar}}\mathcal{S}(\real^{2d})\rightarrow\mathcal{E}_{\hbar}\subset\bigoplus_{i=1}^{I_{\hbar}}\mathcal{S}(\real^{2d}).\label{eq:Lifted_operator_F}
\end{equation}

\end{defn}
The operator $\boldF_{\hbar}$ is nothing but the prequantum transfer
operator $\hat{F}_{N}:C_{N}^{\infty}(P)\to C_{N}^{\infty}(P)$ viewed
through the local charts and local trivialization that we have chosen.
This is a matrix of operators that describe transition between local
data that $\hat{F}_{\hbar}$ induces. The next proposition gives it
in a concrete form. 
\begin{defn}
We write $i\to j$ for $0\le i,j\le I_{\hbar}$ if and only if $f\left(U_{i}\right)\bigcap U_{j}\neq\emptyset$. 
\end{defn}
Clearly we have 
\begin{equation}
\max_{1\le i\le I_{\hbar}}\#\{1\le j\le I_{\hbar}\mid i\to j\}\le C(f)\label{eq:arrow}
\end{equation}
for some constant $C(f)$ which may depend on $f$ but not on $\hbar$.
\begin{prop}
The operator $\boldF_{\hbar}$ is written as 
\[
\boldF_{\hbar}((v_{i})_{i\in I_{\hbar}})=\left(\sum_{i=1}^{I_{\hbar}}\boldF_{j,i}(v_{i})\right)_{j\in I_{\hbar}}
\]
where the component 
\[
\mathbf{F}_{j,i}:\mathcal{S}(\real^{2d})\rightarrow C_{0}^{\infty}(\mathbb{D}(\hbar^{1/2-\theta}))
\]
is defined by $\mathbf{F}_{j,i}\equiv0$ if $i\not\to j$ and, otherwise,
by 
\[
\mathbf{F}_{j,i}(v_{i})=\mathcal{L}_{f_{j,i}}\left(e^{V\circ f\circ\kappa_{i}}\cdot\psi_{j,i}\cdot\chi_{\hbar}\cdot v_{i}\right)
\]
where we set
\begin{align}
f_{j,i} & :=\kappa_{j}^{-1}\circ f\circ\kappa_{i},\label{eq:f_ji}\\
\psi_{j,i} & :=\psi_{j}\circ f_{j,i}\label{eq:psi_ji}
\end{align}
and $\mathcal{L}_{f_{j,i}}$ is the Euclidean prequantum transfer
operator defined in (\ref{eq:Lg}) with $g=f_{j,i}$. \end{prop}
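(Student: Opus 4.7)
The plan is to verify the identity by unfolding the three operators $\boldI_{\hbar}^{*}$, $\hat{F}_{N}$, $\boldI_{\hbar}$ in sequence, and then matching the result with the expression (\ref{eq:F_f_affine}) for the Euclidean prequantum transfer operator $\mathcal{L}_{f_{j,i}}$. The vanishing of $\boldF_{j,i}$ when $i\not\to j$ will be immediate from the support condition: if $f(U_{i})\cap U_{j}=\emptyset$ then $\psi_{j,i}=\psi_{j}\circ f_{j,i}$ vanishes identically on $\supp\chi_{\hbar}\subset\mathbb{D}(2\hbar^{1/2-\theta})$, since $f_{j,i}$ would map points of this disk outside $\supp\psi_{j}\subset\mathbb{D}(\hbar^{1/2-\theta})$.

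First I will compute $w:=\boldI_{\hbar}^{*}((v_{i})_{i})\in C_{N}^{\infty}(P)$ using (\ref{eq:expression_of_I*}), then apply the definition (\ref{eq:def_prequantum_operator_F}) of $\hat{F}_{N}$ to get $(\hat{F}_{N}w)(p)=e^{V(\pi(p))}w(\tilde{f}^{-1}(p))$, and finally extract the $j$-th local component by the rule $(\boldI_{\hbar}u)_{j}(y)=\psi_{j}(y)\cdot u(\tau_{j}(\kappa_{j}(y)))$. The factor $e^{V(\kappa_{j}(y))}$ arising at this stage will, once evaluated at $y=f_{j,i}(z)$, become $e^{V\circ f\circ\kappa_{i}(z)}$, which accounts for the multiplier in the definition of $\mathbf{F}_{j,i}$.

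The core step is the evaluation of $w$ at the point $\tilde{f}^{-1}(\tau_{j}(\kappa_{j}(y)))$. Using the local expression of the prequantum map in (\ref{eq:f_tilde_local}), namely $\tilde{f}(\tau_{\alpha}(x))=e^{i2\pi\mathcal{A}_{\beta,\alpha}(x)}\tau_{\beta}(f(x))$, we obtain
\[
\tilde{f}^{-1}(\tau_{j}(\kappa_{j}(y)))=e^{-i2\pi\mathcal{A}_{j,i}(f^{-1}(\kappa_{j}(y)))}\tau_{i}(f^{-1}(\kappa_{j}(y)))
\]
for any $i$ with $f^{-1}(\kappa_{j}(y))\in U_{i}$. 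Evaluating $w$ here using equivariance and (\ref{eq:expression_of_I*}) produces the phase $e^{-iN\cdot 2\pi\mathcal{A}_{j,i}(f^{-1}(\kappa_{j}(y)))}=e^{-(i/\hbar)\mathcal{A}_{j,i}(f^{-1}(\kappa_{j}(y)))}$, together with the cutoff $\chi_{\hbar}$ and the value $v_{i}(\kappa_{i}^{-1}(f^{-1}(\kappa_{j}(y))))=v_{i}(f_{j,i}^{-1}(y))$. Changing variables via $y\mapsto f_{j,i}(z)$ and collecting, the expression $\psi_{j}(f_{j,i}(z))$ reappears as $\psi_{j,i}(z)$ under the action of $\mathcal{L}_{f_{j,i}}$.

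The remaining step, which is the main (though still elementary) obstacle, is the identification of the action phase $\mathcal{A}_{j,i}(f^{-1}\circ\kappa_{j}(y))$ with the Euclidean action $\mathcal{A}_{f_{j,i}}(f_{j,i}^{-1}(y))$ appearing in (\ref{eq:F_f_affine}). This is carried out by observing that, by Proposition \ref{prop:Normal-coordinates.} and the choice of trivializations $\tau_{i}$ in Proposition \ref{prop:Local-charts}(4), the pullback $\kappa_{i}^{*}\eta_{i}$ coincides with the standard Euclidean one-form $\eta=\frac{1}{2}\sum(q^{i}dp^{i}-p^{i}dq^{i})$ on $\mathbb{D}(c)$. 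Consequently $\kappa_{i}^{*}(f^{*}\eta_{j}-\eta_{i})=f_{j,i}^{*}\eta-\eta$, and integrating both sides along corresponding paths identifies $\mathcal{A}_{j,i}\circ\kappa_{i}$ with $\mathcal{A}_{f_{j,i}}$ up to a constant. The constant is absorbed into the choice of reference point built into $\tilde{f}$ and into $c(x_{0})$ of (\ref{eq:def_A_action}); by Remark on the uniqueness of the lift $\tilde{f}$ up to a global phase $e^{i\theta_{0}}$, this constant can be normalized to zero without loss of generality. Assembling these identifications then yields exactly
\[
\boldF_{j,i}(v_{i})=\mathcal{L}_{f_{j,i}}\bigl(e^{V\circ f\circ\kappa_{i}}\cdot\psi_{j,i}\cdot\chi_{\hbar}\cdot v_{i}\bigr),
\]
which completes the verification.
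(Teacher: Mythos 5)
Your derivation follows the same route the paper takes: unfold $\boldI_{\hbar}^{*}$, $\hat F_{N}$, $\boldI_{\hbar}$ and appeal to the local expression (\ref{eq:f_tilde_local}) of the prequantum map (equivalently, Proposition \ref{prop:Local-expression-of_FN}). The vanishing of $\boldF_{j,i}$ for $i\not\to j$ and the extraction of the factors $e^{V\circ f\circ\kappa_i}$, $\psi_{j,i}$, $\chi_\hbar$ are all correct. The one place the argument is wrong is the last step, where you identify the abstract action $\mathcal{A}_{j,i}$, defined via the sections $\tau_i,\tau_j$ and the lift $\tilde f$, with the Euclidean action $\mathcal{A}_{f_{j,i}}$ of (\ref{eq:A_g}), which is anchored at the origin so that $\mathcal{A}_{f_{j,i}}(0)=0$. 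Your pullback identity $\kappa_i^{*}(f^{*}\eta_j-\eta_i)=f_{j,i}^{*}\eta-\eta$ is correct and matches the \emph{increments} of the two actions, but the additive constant $c(m_i)=\theta_0^{(j,i)}/(2\pi)$, fixed by $\tilde f(\tau_i(m_i))=e^{i\theta_0^{(j,i)}}\tau_j(f(m_i))$, is not zero in general.

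Your claim that this constant ``can be normalized to zero without loss of generality'' via the global-phase freedom of $\tilde f$ does not work. That freedom is a single $\mathbf{U}(1)$ parameter, while there is one constant $c^{(j,i)}(m_i)$ for \emph{each} pair $(i,j)$ with $i\to j$. The only remaining gauge freedom compatible with condition (4) of Proposition \ref{prop:Local-charts} is multiplying each $\tau_i$ by a constant phase $e^{i\chi_i}$, and this changes $\theta_0^{(j,i)}$ only by $\chi_i-\chi_j$, a coboundary; the $\theta_0^{(j,i)}$ need not be of that form. Concretely, if $m_i$ is a fixed point of $f$ in the chart cover, then $\theta_0^{(i,i)}=2\pi S_{1,m_i}$ is the action of that periodic point, and no gauge choice eliminates it. The honest conclusion, which is what the paper tacitly uses, is that each $\mathbf{F}_{j,i}$ coincides with $\mathcal{L}_{f_{j,i}}\bigl(e^{V\circ f\circ\kappa_i}\psi_{j,i}\chi_\hbar\,\cdot\,\bigr)$ only up to a constant unimodular factor $e^{-i2\pi N\,c^{(j,i)}(m_i)}$. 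Since every subsequent use of this proposition is a norm or trace-norm estimate on individual charts, this phase is harmless; but the normalization you invoke is not available, and the step should be phrased as an equality up to such a phase rather than an equality on the nose.
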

\begin{rem}
$\chi_{\hbar}$ is actually not necessary.
\end{rem}
The maps $f_{j,i}$ is illustrated on Figure \ref{fig:map_fij}.
\begin{proof}
The expression of the operator $\hat{F}_{N}$ in local coordinates
has been given in Proposition \ref{prop:Local-expression-of_FN}.
Taking the multiplication by functions $\psi_{i}$, $1\le i\le I_{\hbar}$,
in the definitions of the operators $\mathbf{I_{\hbar}}$ and $\mathbf{I_{\hbar}}^{*}$
into account, we obtain the expression of $\mathbf{F}_{j,i}$ as above. 
\end{proof}
\begin{figure}
\centering{}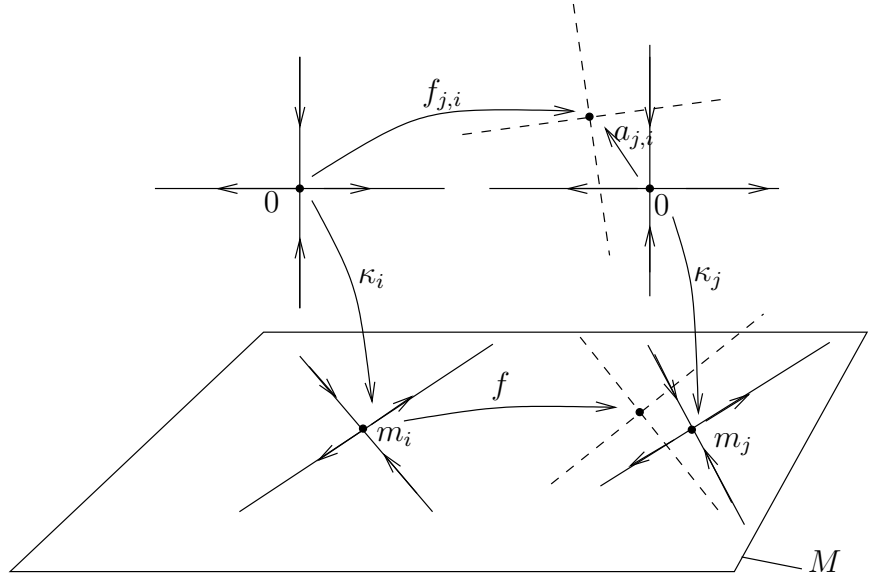\caption{\label{fig:map_fij}Illustration of the local map $f_{j,i}$ defined
in (\ref{eq:f_ji}) and $a_{j,i}$ defined in (\ref{eq:fij_decomposition}).}
\end{figure}

We define 
\[
V_{j}=\max\{V(m)\mid m\in U_{j}\}\quad\mbox{for }1\le j\le I_{\hbar}.
\]
Since the function $V$ is almost constant on each $U_{j}$, we have 
\begin{lem}
\label{lem:multipliacation_by_Psi_ji}If we set 
\[
\mathscr{X_{\hbar}=}\{\psi_{j,i}\cdot\chi_{\hbar}\mid1\le i,j\le I_{\hbar},\; i\to j\}\qquad\mbox{(resp. \;\;}\mathscr{X_{\hbar}=}\{e^{V\circ f\circ\kappa_{i}}\cdot\psi_{j,i}\cdot\chi_{\hbar}\mid1\le i,j\le I_{\hbar},\; i\to j\}),
\]
it satisfies the conditions (C1) and (C2) in Subsection \ref{ss:trumcation}.
(The constants $C$ and $C_{\alpha}$ will depend on $f$ and $V$
though not on $\hbar$.) For $1\le i,j\le I_{\hbar}$ such that $i\to j$,
we have 
\[
\left\Vert \multiplication(e^{V\circ f\circ\kappa_{i}}\cdot\psi_{j,i}\cdot\chi_{\hbar})-e^{V_{j}}\cdot\multiplication(\psi_{j,i}\cdot\chi_{\hbar})\right\Vert _{\mathcal{H}_{\hbar}^{r}(\real^{2d})}\le C(f,V)\cdot\hbar^{\theta}
\]
for some constant $C(f,V)$ independent of $\hbar$. \end{lem}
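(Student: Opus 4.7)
\bigskip

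\noindent\textbf{Proof proposal.} The plan is to verify the two claims separately, first checking the symbol-type conditions (C1), (C2) on the product functions and then using the smoothness of $V$ together with the small diameter of the chart $U_j$ to control the difference.

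For the first claim, I would decompose each candidate symbol into its three factors and verify (C1), (C2) factor by factor. The support condition (C1) is immediate: $\chi_\hbar$ is supported in $\mathbb{D}(4\hbar^{1/2-\theta})$ by its definition \eqref{eq:def_chi_hbar}, and multiplying by $\psi_{j,i}$ or by $e^{V\circ f\circ\kappa_i}$ does not enlarge this support, so one may take $C_*=4$. For the derivative bound (C2), I would handle each factor in turn: (i) $\chi_\hbar$ satisfies $|\partial^\alpha\chi_\hbar|\le C_\alpha\hbar^{-(1/2-\theta)|\alpha|}$ directly from the chain rule applied to its definition; (ii) $\psi_{j,i}=\psi_j\circ f_{j,i}$ inherits the estimate from $\psi_j$ (which is in the partition of unity family satisfying (C2) by Proposition \ref{prop:Local-charts}) via Fa\`a di Bruno, since the derivatives of $f_{j,i}$ are $O(1)$ uniformly in $\hbar$ by the bounds in condition (5) of that same proposition; (iii) $e^{V\circ f\circ\kappa_i}$ is smooth with all derivatives bounded independently of $\hbar$, since $V$ and $f$ are $C^\infty$ on $M$ and $\kappa_i$ has uniformly bounded derivatives. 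A Leibniz rule applied to the product then yields (C2) with constants depending on $V$ and $f$ but not on $\hbar$.

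For the second claim, I would write
\[
\multiplication(e^{V\circ f\circ\kappa_i}\cdot\psi_{j,i}\cdot\chi_\hbar)-e^{V_j}\cdot\multiplication(\psi_{j,i}\cdot\chi_\hbar)=\multiplication\bigl((e^{V\circ f\circ\kappa_i}-e^{V_j})\cdot\psi_{j,i}\cdot\chi_\hbar\bigr)
\]
and estimate the operator norm on $\mathcal{H}_\hbar^r(\real^{2d})$ by means of Corollary \ref{lm:XM_exchange}. That corollary gives the bound $\|\psi\|_\infty+C\hbar^\theta$ provided the multiplier belongs to a family satisfying (C1), (C2), which (by the same verification as in the first claim, since the difference $e^{V\circ f\circ\kappa_i}-e^{V_j}$ has all derivatives uniformly bounded) holds here. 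The sup norm of $(e^{V\circ f\circ\kappa_i}-e^{V_j})\cdot\psi_{j,i}\cdot\chi_\hbar$ is estimated on its support, where $\psi_{j,i}(x)\neq 0$ forces $f(\kappa_i(x))\in\kappa_j(\mathrm{supp}\,\psi_j)\subset U_j$; the diameter of $U_j$ is $O(\hbar^{1/2-\theta})$ and $V$ is Lipschitz, so
\[
|V(f(\kappa_i(x)))-V_j|\le C(V)\cdot\hbar^{1/2-\theta}
\]
and hence $|e^{V\circ f\circ\kappa_i(x)}-e^{V_j}|\le C(V)\cdot\hbar^{1/2-\theta}$ on the support. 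Combining, the multiplication operator norm is bounded by $C\hbar^{1/2-\theta}+C\hbar^\theta$, and since $\theta<\beta/8<1/4$ by \eqref{eq:cond_theta} we have $1/2-\theta>\theta$, so both terms are $O(\hbar^\theta)$.

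Neither step presents a serious obstacle. The only point requiring a little care is the verification of (C2) for $\psi_{j,i}=\psi_j\circ f_{j,i}$: one needs to observe that, although $f_{j,i}$ itself may have derivatives of order depending on $f$, these bounds are uniform in both $\hbar$ and the index pair $(i,j)$, which is precisely what Proposition \ref{prop:Local-charts}(5) provides after factoring out the isometric affine map $A_{j,i}\in\mathcal{A}$. The rest is a bookkeeping exercise.
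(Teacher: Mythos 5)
Your proposal is correct and takes essentially the same approach as the paper's own (much terser) proof: verify (C1), (C2) directly from the chart construction, then express the difference as multiplication by $(e^{V\circ f\circ\kappa_i}-e^{V_j})\cdot\psi_{j,i}\cdot\chi_\hbar$ and apply Corollary~\ref{lm:XM_exchange} together with the Lipschitz estimate $|V(f(\kappa_i(x)))-V_j|=O(\hbar^{1/2-\theta})$ on the support. One minor slip: the uniform derivative bounds you need for $f_{j,i}$ come from Proposition~\ref{prop:Local-charts}(2) (uniform $C^k$ bounds on the charts $\kappa_i$) together with smoothness of $f$, not from condition~(5), which concerns the coordinate-change maps $\kappa_{j,i}$ rather than $f_{j,i}$.
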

\begin{proof}
The former claim should be obvious from the choice of the coordinates
$\kappa_{i}$ and the functions $\psi_{i}$ for $i\in I_{\hbar}$.
We can get the latter claim if we apply Corollary \ref{lm:XM_exchange}
to the multiplication operators by $e^{V\circ f\circ\kappa_{i}}\cdot\psi_{j,i}\cdot\chi_{\hbar}-e^{V_{j}}\cdot\psi_{j,i}\cdot\chi_{\hbar}=(e^{V\circ f\circ\kappa_{i}}-e^{V_{j}})\cdot\psi_{j,i}\cdot\chi_{\hbar}$.
\end{proof}

\subsection{The anisotropic Sobolev spaces\label{sub:anisotropic-Sobolev-space}}
\begin{defn}
\label{dfn:global_anisotropic_Sobolev_space} The \emph{Anisotropic
Sobolev space} $\mathcal{H}_{\hbar}^{r}\left(P\right)$ is defined
as the completion of $C_{N}^{\infty}\left(P\right)$ with respect
to the norm
\[
\|u\|_{\mathcal{H}_{\hbar}^{r}\left(P\right)}:=\left(\sum_{i=1}^{I_{\hbar}}\|u_{i}\|_{\mathcal{H}_{\hbar}^{r}(\real^{2d})}^{2}\right)^{1/2}\quad\mbox{ for }u\in C_{N}^{\infty}\left(P\right),
\]
where $u_{i}=\left(\mathbf{I}_{\hbar}\left(u\right)\right)_{i}\in C_{0}^{\infty}\left(\mathbb{D}(\hbar^{1/2-\theta})\right)$
are the local data defined in (\ref{eq:local_data_ui}) and $\|u_{i}\|_{\mathcal{H}_{\hbar}^{r}(\real^{2d})}^{2}$
is the anisotropic Sobolev norm on $C_{0}^{\infty}\left(\mathbb{R}^{2d}\right)$
in Definition \ref{def:escape_function_Hr}. We define the Hilbert
spaces $\mathcal{H}_{\hbar}^{r,\pm}\left(P\right)$ in the parallel
manner, replacing $\|u_{i}\|_{\mathcal{H}_{\hbar}^{r}(\real^{2d})}^{2}$
by the norms $\|u_{i}\|_{\mathcal{H}_{\hbar}^{r,\pm}(\real^{2d})}^{2}$
respectively. \end{defn}
\begin{rem}
(1) By definition, the operation $\boldI_{\hbar}$ extends uniquely
to an isometric injection 
\[
\boldI_{\hbar}:\mathcal{H}_{\hbar}^{r}(P)\to\bigoplus_{i=1}^{I_{\hbar}}\mathcal{H}_{\hbar}^{r}(\mathbb{D}(\hbar^{1/2-\theta}))\subset\bigoplus_{i=1}^{I_{\hbar}}\mathcal{H}_{\hbar}^{r}(\real^{2d})
\]
where $\mathcal{H}_{\hbar}^{r}(\mathbb{D}(\hbar^{1/2-\theta}))$ denotes
the subspace that consists of elements supported on the disk $\mathbb{D}(\hbar^{1/2-\theta})$.

(2) From (\ref{eq:I*_I_is_Identity}), we have $\boldI_{\hbar}^{*}\circ\boldI_{\hbar}=\mathrm{Id}$
on $\mathcal{H}_{\hbar}^{r}\left(P\right)$ and also on $\mathcal{H}_{\hbar}^{r,\pm}\left(P\right)$.\end{rem}
\begin{lem}
\label{lm:IIbdd} The projector $\boldI_{\hbar}\circ\boldI_{\hbar}^{*}:\mathcal{E}_{\hbar}\rightarrow\mathcal{E}_{\hbar}$
extends to bounded operators 
\[
\boldI_{\hbar}\circ\boldI_{\hbar}^{*}:\bigoplus_{i=1}^{I_{\hbar}}\mathcal{H}_{\hbar}^{r,+}(\real^{2d})\to\bigoplus_{i=1}^{I_{\hbar}}\mathcal{H}_{\hbar}^{r}(\mathbb{D}(\hbar^{1/2-\theta}))
\]
and 
\[
\boldI_{\hbar}\circ\boldI_{\hbar}^{*}:\bigoplus_{i=1}^{I_{\hbar}}\mathcal{H}_{\hbar}^{r}(\real^{2d})\to\bigoplus_{i=1}^{I_{\hbar}}\mathcal{H}_{\hbar}^{r,-}(\mathbb{D}(\hbar^{1/2-\theta})).
\]
Further the operator norms of these projectors are bounded by a constant
independent of~$\hbar$. \end{lem}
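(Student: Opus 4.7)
The strategy is to write $\boldI_{\hbar}\circ\boldI_{\hbar}^{*}$ as a matrix of operators $\left[\boldP_{j,i}\right]_{1\le i,j\le I_{\hbar}}$, identify each block $\boldP_{j,i}$ as (a variant of) a Euclidean prequantum transfer operator on $\real^{2d}$ for a map close to an affine isometry from $\mathcal{A}$, and then apply Schur's test using the uniform local finiteness (\ref{eq:bound_on_i_sim_j}) of the covering.

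First I would compute the blocks explicitly. Combining (\ref{eq:local_data_ui}) and (\ref{eq:expression_of_I*}), one finds that for $(u_{i})_{i=1}^{I_{\hbar}}\in\mathcal{E}_{\hbar}$,
\[
\bigl(\boldI_{\hbar}\circ\boldI_{\hbar}^{*}(u_{i})_{i}\bigr)_{j}(x)=\psi_{j}(x)\sum_{i\,:\,U_{i}\cap U_{j}\neq\emptyset}e^{iN\chi_{j,i}(\kappa_{j}(x))}\cdot\chi_{\hbar}(\kappa_{j,i}^{-1}(x))\cdot u_{i}(\kappa_{j,i}^{-1}(x))
\]
where $\chi_{j,i}:U_{i}\cap U_{j}\to\mathbb{R}$ is the gauge-change function determined by $\tau_{j}=e^{i\chi_{j,i}}\tau_{i}$ (cf.~(\ref{eq:gauge_transform})), and $\kappa_{j,i}=\kappa_{j}^{-1}\circ\kappa_{i}$. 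Only blocks with $U_{i}\cap U_{j}\neq\emptyset$ are nonzero; by Proposition \ref{prop:Local-charts}(3) the number of such $j$ for fixed $i$ (and vice versa) is bounded by $C_{0}$.

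Next I would recognize each nonzero block as a Euclidean prequantum transfer operator. Writing $g_{j,i}=A_{j,i}\circ\kappa_{j,i}$ as in Proposition \ref{prop:Local-charts}(5), so that $\kappa_{j,i}^{-1}=g_{j,i}^{-1}\circ A_{j,i}$, the phase $e^{iN\chi_{j,i}\circ\kappa_{j}}$ is precisely (up to a constant) the action phase associated with the change of local trivialization induced by $\kappa_{j,i}$. A short computation using (\ref{eq:Action_affine_map-1})--(\ref{eq:Action_affine_map}) shows that
\[
\boldP_{j,i}=\multiplication(\psi_{j})\circ\prequantumL_{A_{j,i}}\circ\prequantumL_{g_{j,i}^{-1}}\circ\multiplication(\chi_{\hbar})
\]
(up to modifications that only involve functions in the class $\scrX_{\hbar}$, absorbable by Corollary \ref{lm:XM_exchange}). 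Here $A_{j,i}\in\mathcal{A}$ and $g_{j,i}^{-1}\in\scrG_{\hbar}$ in the sense of Settings I--II of Section \ref{sec:Nonlinear-prequantum-maps}, since the $C^{1}$-closeness of $g_{j,i}$ to the identity at size $\hbar^{\beta(1/2-\theta)}$ is inherited by $g_{j,i}^{-1}$. Lemma \ref{lm:invariance_wrt_translation} then says $\prequantumL_{A_{j,i}}$ is an isometry on $\mathcal{H}_{\hbar}^{r}(\real^{2d})$ and on $\mathcal{H}_{\hbar}^{r,\pm}(\real^{2d})$, while Proposition \ref{pp:bdd_g} gives
\[
\left\Vert \prequantumL_{g_{j,i}^{-1}}\circ\multiplication(\chi_{\hbar})\right\Vert _{\mathcal{H}_{\hbar}^{r,+}\to\mathcal{H}_{\hbar}^{r}}\le C_{0},\qquad\left\Vert \prequantumL_{g_{j,i}^{-1}}\circ\multiplication(\chi_{\hbar})\right\Vert _{\mathcal{H}_{\hbar}^{r}\to\mathcal{H}_{\hbar}^{r,-}}\le C_{0}.
\]
Combined with Corollary \ref{lm:XM_exchange} to bound $\multiplication(\psi_{j})$, each block is bounded in the required spaces by a constant $C_{0}$ independent of $\hbar$ and of $(i,j)$.

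Finally I would pass from block-wise bounds to the $\ell^{2}$-bound by Schur's test. Since, for fixed $i$, the number of $j$'s with $\boldP_{j,i}\neq0$ is at most $C_{0}$, and symmetrically for fixed $j$, one gets
\[
\sum_{j}\left\Vert (\boldI_{\hbar}\circ\boldI_{\hbar}^{*}(u_{i})_{i})_{j}\right\Vert _{\mathcal{H}_{\hbar}^{r}(\real^{2d})}^{2}\le C_{0}^{2}\cdot C_{0}\cdot\sum_{i}\left\Vert u_{i}\right\Vert _{\mathcal{H}_{\hbar}^{r,+}(\real^{2d})}^{2},
\]
and likewise for the second mapping statement. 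The main (really only) delicate point is the algebraic verification in the second paragraph, namely that the phase $e^{iN\chi_{j,i}\circ\kappa_{j}}$ matches, up to a constant and an element of $\mathcal{A}$, the action function $\mathcal{A}_{g_{j,i}^{-1}}$ appearing in the definition (\ref{eq:Lg}) of $\prequantumL_{g_{j,i}^{-1}}$; this is a direct but careful consequence of (\ref{eq:f_tilde_local}), (\ref{eq:def_A_action}) applied to the identity lift over the overlap $U_{i}\cap U_{j}$, together with Proposition \ref{prop:Local-charts}(4) which ensures that the local trivializations of $P$ are the standard ones (\ref{eq:eta_euclidian_model-1}) in each chart.
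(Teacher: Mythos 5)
Your proof follows the paper's own approach: decompose $\boldI_{\hbar}\circ\boldI_{\hbar}^{*}$ into blocks indexed by overlapping charts, recognize each block (up to multiplication by a function in $\scrX_{\hbar}$) as $\prequantumL_{A}\circ\prequantumL_{g}\circ\multiplication(\chi_{\hbar})$ with $A\in\mathcal{A}$ an isometry and $g\in\scrG_{\hbar}$ close to the identity, bound each block via Lemma \ref{lm:invariance_wrt_translation}, Proposition \ref{pp:bdd_g} and Corollary \ref{lm:XM_exchange}, and conclude by the uniform local finiteness (\ref{eq:bound_on_i_sim_j}). There is a small inversion slip in your explicit block formula: since the $j$-th component involves $u_{i}\circ\kappa_{j,i}^{-1}$ and $\prequantumL_{f}u=e^{\cdots}u\circ f^{-1}$, the correct factor is $\prequantumL_{\kappa_{j,i}}=\prequantumL_{A_{j,i}^{-1}}\circ\prequantumL_{g_{j,i}}$ rather than $\prequantumL_{A_{j,i}}\circ\prequantumL_{g_{j,i}^{-1}}$; this is harmless because $A_{j,i}^{-1}\in\mathcal{A}$ and $g_{j,i}^{-1}$ inherits the closeness-to-identity bounds, exactly as you observed, but it is worth getting the algebra straight before invoking the propositions.
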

\begin{rem}
The operator $\boldI_{\hbar}\circ\boldI_{\hbar}^{*}$ will not be
a bounded operator from $\bigoplus_{i=1}^{I_{\hbar}}\mathcal{H}_{\hbar}^{r}(\real^{2d})$
to itself. \end{rem}
\begin{proof}
To prove the claim, it is enough to apply Proposition \ref{pp:bdd_g}
and Corollary \ref{lm:XM_exchange} to each component of $\boldI_{\hbar}\circ\boldI_{\hbar}^{*}$
with setting 
\begin{equation}
\scrG_{\hbar}=\{A_{j,i}\circ\kappa_{j,i}\mid1\le i,j\le I_{\hbar},\; U_{i}\cap U_{j}\neq\emptyset\}\label{eq:scrXG_setting}
\end{equation}
and 
\[
\scrX_{\hbar}=\{\psi_{j}\circ\kappa_{j,i}\cdot\chi_{\hbar}\mid1\le i,j\le I_{\hbar},\; U_{i}\cap U_{j}\neq\emptyset\},
\]
and use (\ref{eq:bound_on_i_sim_j}). (See also the remark below.) \end{proof}
\begin{rem}
\label{rem:Aij} The affine transformation $A_{j,i}$ in (\ref{eq:scrXG_setting})
is that appeared in the choice of local coordinates in Proposition
\ref{prop:Local-charts}. Note that the prequantum transfer operator
$\prequantumL_{A_{j,i}}$ is a unitary operator on $\mathcal{H}_{\hbar}^{r}(\real^{2d})$
(and on $\mathcal{H}_{\hbar}^{r,\pm}(\real^{2d})$), by Lemma~\ref{lm:invariance_wrt_translation},
and hence we may neglect the post- or pre-composition of $\prequantumL_{A_{ij}}$
when we consider the operator norm on $\mathcal{H}_{\hbar}^{r}(\real^{2d})$
(and on $\mathcal{H}_{\hbar}^{r,\pm}(\real^{2d})$). For the later
argument, we also note that, from Lemma \ref{lm:invariance_wrt_translation},
the prequantum transfer operator $\prequantumL_{A_{j,i}}$ commutes
with the projection operators $t_{\hbar}^{(k)}$ defined in (\ref{eq:def_t}).
\end{rem}
For the operator $\hat{F}_{\hbar}$ on the Hilbert space $\mathcal{H}_{\hbar}^{r}(P)$,
we confirm the following fact at this point, though we will give a
more detailed description later. 
\begin{lem}
\label{lm:bounded_F} The operator $\boldF_{\hbar}$ defined in (\ref{eq:Lifted_operator_F})
extends uniquely to the bounded operator 
\begin{equation}
\boldF_{\hbar}:\bigoplus_{i=1}^{I_{\hbar}}\mathcal{H}_{\hbar}^{r}(\real^{2d})\to\bigoplus_{i=1}^{I_{\hbar}}\mathcal{H}_{\hbar}^{r}(\mathbb{D}(\hbar^{1/2-\theta}))\label{eq:Lifted_operator_F2}
\end{equation}
and the operator norm is bounded by a constant independent of $\hbar$.
Consequently the same result holds for the prequantum transfer operator
$\hat{F}_{\hbar}:\mathcal{H}_{\hbar}^{r}\left(P\right)\rightarrow\mathcal{H}_{\hbar}^{r}\left(P\right)$. \end{lem}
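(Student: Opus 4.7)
The plan is to reduce boundedness of $\boldF_{\hbar}$ to uniform (in $\hbar$ and in $(i,j)$) estimates on its matrix entries $\mathbf{F}_{j,i}$, and then to invoke the finite-overlap property from (\ref{eq:arrow}) and (\ref{eq:bound_on_i_sim_j}) via a Schur-type argument on the direct sum. The statement for $\hat{F}_{N}$ will then follow by a short formal manipulation using that $\mathbf{I}_\hbar$ is isometric.

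First I would factorize each local map $f_{j,i}=\kappa_{j}^{-1}\circ f\circ\kappa_{i}$ whenever $i\to j$ as
\[
f_{j,i}=a_{j,i}\circ g_{j,i},
\]
where $a_{j,i}$ is a symplectic affine map whose linear part is a hyperbolic matrix of the standard form (\ref{eq:hyperbolic_f}) with $\|A_{j,i}^{-1}\|\le\lambda^{-1}$ (large $\lambda$ having been arranged in the preamble of this section by passing to an iterate of $f$), and $g_{j,i}$ satisfies $g_{j,i}(0)=0$ together with
\[
\|Dg_{j,i}(0)-\mathrm{Id}\|\le C\hbar^{\beta(1/2-\theta)}
\]
and bounded $C^{s}$ norms for every $s$. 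The existence of such a factorization rests on Proposition \ref{prop:Local-charts}, which arranges that $D\kappa_{i}(0)$ sends the standard axes isometrically onto $E_{u}(m_{i})$ and $E_{s}(m_{i})$, together with the fact that $f(m_{i})$ lies at distance $O(\hbar^{1/2-\theta})$ from $m_{j}$ and that the hyperbolic splitting is $\beta$-H\"older continuous, so that $E_{u,s}(f(m_{i}))$ differ from $E_{u,s}(m_{j})$ by $O(\hbar^{\beta(1/2-\theta)})$. In particular the family $\{g_{j,i}\}$ satisfies the hypotheses (G1)--(G3) of Setting II in Section \ref{ss:nonlinear}, and the prequantum transfer operator factorizes (up to a global unit-modulus phase coming from the cocycle for the action function) as $\mathcal{L}_{f_{j,i}}=\mathcal{L}_{a_{j,i}}\circ\mathcal{L}_{g_{j,i}}$.

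Next I would trace the composition $\mathbf{F}_{j,i}=\mathcal{L}_{f_{j,i}}\circ\multiplication(\phi_{j,i})$ with $\phi_{j,i}:=e^{V\circ f\circ\kappa_{i}}\cdot\psi_{j,i}\cdot\chi_{\hbar}$ through the chain of weighted spaces
\[
\mathcal{H}_\hbar^{r}(\real^{2d})\xrightarrow{\multiplication(\phi_{j,i})}\mathcal{H}_\hbar^{r}(\real^{2d})\xrightarrow{\mathcal{L}_{g_{j,i}}\circ\multiplication(\chi_\hbar)}\mathcal{H}_\hbar^{r,-}(\real^{2d})\xrightarrow{\mathcal{L}_{a_{j,i}}}\mathcal{H}_\hbar^{r,+}(\real^{2d})\hookrightarrow\mathcal{H}_\hbar^{r}(\real^{2d}),
\]
where I have used $\phi_{j,i}=\phi_{j,i}\cdot\chi_\hbar$ to insert the second $\chi_\hbar$ factor. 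The first arrow is bounded uniformly by Corollary \ref{lm:XM_exchange} together with Lemma \ref{lem:multipliacation_by_Psi_ji} (which ensures $\phi_{j,i}\in\mathscr{X}_\hbar$); the second by Proposition \ref{pp:bdd_g}; the third by Lemma \ref{lm:boundedness_of_hyperbolic_linear_map}, after peeling off the affine translation part of $a_{j,i}$ (which lies in $\mathcal{A}$ and acts isometrically between any of the weighted spaces by Lemma \ref{lm:invariance_wrt_translation}); and the final inclusion has norm at most $1$ since $W^{r}\le W^{r,+}$. All four bounds are uniform in $\hbar$ and in $(i,j)$ with $i\to j$, so $\|\mathbf{F}_{j,i}\|_{\mathcal{H}_\hbar^{r}(\real^{2d})\to\mathcal{H}_\hbar^{r}(\real^{2d})}\le C$ uniformly. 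Since by (\ref{eq:arrow}) the matrix $(\mathbf{F}_{j,i})$ has at most $C(f)$ nonzero entries in each row and each column, the standard Schur estimate gives boundedness of $\boldF_\hbar$ on the direct sum with norm $\le C(f)\cdot C$, establishing (\ref{eq:Lifted_operator_F2}).

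The consequent boundedness of $\hat{F}_{N}:\mathcal{H}_\hbar^{r}(P)\to\mathcal{H}_\hbar^{r}(P)$ is then immediate: the identity $\mathbf{I}_\hbar^{*}\circ\mathbf{I}_\hbar=\mathrm{Id}$ gives $\mathbf{I}_\hbar\circ\hat{F}_{N}=\boldF_\hbar\circ\mathbf{I}_\hbar$, hence
\[
\|\hat{F}_{N}u\|_{\mathcal{H}_\hbar^{r}(P)}=\|\mathbf{I}_\hbar\hat{F}_{N}u\|_{\oplus\mathcal{H}_\hbar^{r}(\real^{2d})}=\|\boldF_\hbar\mathbf{I}_\hbar u\|_{\oplus\mathcal{H}_\hbar^{r}(\real^{2d})}\le C\|\mathbf{I}_\hbar u\|_{\oplus\mathcal{H}_\hbar^{r}(\real^{2d})}=C\|u\|_{\mathcal{H}_\hbar^{r}(P)},
\]
using that $\mathbf{I}_\hbar$ is an isometric embedding by Definition \ref{dfn:global_anisotropic_Sobolev_space}. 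The main technical obstacle is the verification in the first paragraph that $g_{j,i}\in\mathscr{G}_\hbar$; here the balance between the H\"older exponent $\beta$ of the hyperbolic splitting and the chart size $\hbar^{1/2-\theta}$ (together with the choice $\theta<\beta/8$) is crucial, and is precisely the reason why the bound in condition (G2) of Setting II was set to $C\hbar^{\beta(1/2-\theta)}$.
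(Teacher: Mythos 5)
Your proof is correct in substance and uses a genuinely different factorization from the paper's. The paper decomposes $f_{j,i}=a_{j,i}\circ g_{j,i}\circ B_{j,i}$ into a translation, a near-identity map, and a hyperbolic linear map, and then commutes $\mathcal{L}_{B_{j,i}}$ past the multiplication operator so that the hyperbolic factor acts \emph{first}: the chain is $\mathcal{H}_\hbar^{r}\hookrightarrow\mathcal{H}_\hbar^{r,-}\xrightarrow{\mathcal{L}_{B_{j,i}}}\mathcal{H}_\hbar^{r,+}\xrightarrow{\mathcal{L}^{(1)}}\mathcal{H}_\hbar^{r}\xrightarrow{\mathcal{L}_{a_{j,i}}}\mathcal{H}_\hbar^{r}$, using the \emph{first} inequality of Proposition \ref{pp:bdd_g}. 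This forces the paper to verify that $\phi_{j,i}\circ B_{j,i}^{-1}=\psi_j\circ a_{j,i}\circ g_{j,i}\cdot(\dots)$ is still supported in $\mathbb{D}(2\hbar^{1/2-\theta})$. You instead merge the translation and the linear part into one hyperbolic affine map and apply it \emph{last}, routing $\mathcal{H}_\hbar^{r}\to\mathcal{H}_\hbar^{r}\to\mathcal{H}_\hbar^{r,-}\to\mathcal{H}_\hbar^{r,+}\hookrightarrow\mathcal{H}_\hbar^{r}$ and using the \emph{second} inequality of Proposition \ref{pp:bdd_g}. Since the escape-function triple $W^{r,-}\le W^r\le W^{r,+}$ is symmetric under composing with the expanding linear map on either side, both orderings are valid and rely on the same lemmas; what your ordering buys is that you never need to push the multiplication operator across $B_{j,i}$, and therefore you avoid the support computation the paper performs.

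One small slip: the equality $\phi_{j,i}=\phi_{j,i}\cdot\chi_\hbar$ that you invoke to insert the second $\chi_\hbar$ factor is not literally true. Since $\phi_{j,i}=e^{V\circ f\circ\kappa_i}\cdot\psi_{j,i}\cdot\chi_\hbar$ and $\psi_{j,i}=\psi_j\circ f_{j,i}$ can be supported out to radius $\sim\|A_{j,i}\|\,\hbar^{1/2-\theta}$, the function $\phi_{j,i}$ is nonzero on the annulus $2\hbar^{1/2-\theta}<|x|<4\hbar^{1/2-\theta}$ where $\chi_\hbar^2\neq\chi_\hbar$. The fix is immediate: replace the inserted $\chi_\hbar$ by a cutoff $\chi'_\hbar(x)=\chi(c\,\hbar^{-1/2+\theta}x)$ with $c$ chosen small enough that $\chi'_\hbar\equiv1$ on $\supp\chi_\hbar$; then $\phi_{j,i}=\phi_{j,i}\cdot\chi'_\hbar$ holds exactly, $\supp\chi'_\hbar\subset\mathbb{D}(\hbar^{1/2-2\theta})$ for small $\hbar$, and the proof of Proposition \ref{pp:bdd_g} goes through verbatim with $\chi'_\hbar$ in place of $\chi_\hbar$. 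With that adjustment your argument is complete.
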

\begin{proof}
From (\ref{eq:arrow}), it is enough to prove that the operators $\boldF_{j,i}$
for $1\le i,j\le I_{\hbar}$ with $i\to j$ are bounded operators
on $\mathcal{H}_{\hbar}^{r}(\real^{2d})$ and that the operator norms
are bounded by a constant independent of $\hbar$. To see this, we
express the diffeomorphism $f_{j,i}$ in (\ref{eq:f_ji}) as a composition
\begin{equation}
f_{j,i}=a_{j,i}\circ g_{j,i}\circ B_{j,i}\label{eq:fij_decomposition}
\end{equation}
 where 
\begin{itemize}
\item $a_{j,i}:\real^{2d}\to\real^{2d}$ is a translation on $\real^{2d}$, 
\item $B_{j,i}:\real^{2d}\to\real^{2d}$ is a linear map of the form (\ref{eq:hyperbolic_f}),
i.e. $B_{j,i}=\begin{pmatrix}A & 0\\
0 & ^{t}A^{-1}
\end{pmatrix}$, with $A$ an expanding map such that $\|A^{-1}\|\le1/\lambda$, 
\item $g_{j,i}$ is a diffeomorphism such that $\scrG_{\hbar}=\{g_{j,i}\}_{1\le i,j\le I_{\hbar}}$
satisfies the condition (G1), (G2) and (G3) in Subsection \ref{ss:nonlinear}. 
\end{itemize}
This is possible because, if we let $B_{j,i}$ be the linearization
of $f_{j,i}$ at the origin and let $a_{j,i}\in\mathcal{A}$ be the
translation such that $a_{j,i}(f_{j,i}(0))=0$, then $a_{j,i}$, $B_{j,i}$
and $g_{j,i}:=a_{j,i}^{-1}\circ f_{j,i}\circ B_{j,i}^{-1}$ satisfies
the required conditions.
\begin{rem}
This decomposition of the diffeomorphism $f_{j,i}$ will be used later
in the proof of Proposition \ref{prop:key_proposition2} where we
study more detailed properties of $f_{j,i}$.
\end{rem}
From the expression (\ref{eq:fij_decomposition}) of $f_{j,i}$ above,
the operator $\boldF_{j,i}$ is expressed as the composition 
\begin{equation}
\boldF_{j,i}=\prequantumL^{(0)}\circ\prequantumL^{(1)}\circ\prequantumL^{(2)}\label{eq:expression_F_ij}
\end{equation}
where $\mathcal{L}^{\left(0\right)}:=\prequantumL_{a_{j,i}}$ and
$\mathcal{L}^{\left(2\right)}:=\prequantumL_{B_{j,i}}$ are the Euclidean
prequantum transfer operators (\ref{eq:Lg}) for the diffeomorphism
$a_{ij}$ and $B_{ij}$ respectively, while $\prequantumL^{(1)}$
is the operator of the form 
\[
\prequantumL^{(1)}u=\prequantumL_{g_{j,i}}\bigg(\big((e^{V\circ f\circ\kappa_{i}}\cdot\psi_{j,i}\cdot\chi_{\hbar})\circ B_{j,i}^{-1}\big)\cdot u\bigg)
\]
with $\psi_{j,i}$ the function defined in (\ref{eq:psi_ji}). Note
that the functions
\[
(e^{V\circ f\circ\kappa_{i}}\cdot\psi_{j,i}\cdot\chi_{\hbar})\circ B_{j,i}^{-1}=(e^{V\circ f\circ\kappa_{i}}\cdot\chi_{\hbar})\circ B_{j,i}^{-1}\cdot(\psi_{j}\circ a_{j,i}\circ g_{j,i})
\]
is supported on the disk $\mathbb{D}(2\hbar^{1/2-\theta})$, provided
that $\hbar$ is sufficiently small. Hence we may write the operator
$\mathcal{L}^{(1)}$ as 
\[
\prequantumL^{(1)}u=\prequantumL_{g_{j,i}}\circ\multiplication(\chi_{\hbar})\bigg(\big((e^{V\circ f\circ\kappa_{i}}\cdot\psi_{j,i}\cdot\chi_{\hbar})\circ B_{j,i}^{-1}\big)\cdot u\bigg).
\]

From Lemma \ref{lm:invariance_wrt_translation}, $\prequantumL_{a_{j,i}}:\mathcal{H}_{\hbar}^{r}(\real^{2d})\to\mathcal{H}_{\hbar}^{r}(\real^{2d})$
is a unitary operator. From Lemma \ref{lm:boundedness_of_hyperbolic_linear_map},
the operator $\prequantumL_{B_{j,i}}:\mathcal{H}_{\hbar}^{r}(\real^{2d})\to\mathcal{H}_{\hbar}^{r,+}(\real^{2d})$
is bounded and the operator norm is bounded by a constant independent
of $\hbar$. From Lemma \ref{pp:bdd_g} and Corollary \ref{lm:lift_of_multiplication_operator},
so is the operator $\prequantumL^{(1)}:\mathcal{H}_{\hbar}^{r,+}(\real^{2d})\to\mathcal{H}_{\hbar}^{r}(\real^{2d})$,
because 
\begin{equation}
\scrX_{\hbar}=\{(e^{V\circ f\circ\kappa_{i}}\cdot\psi_{j,i}\cdot\chi_{\hbar})\circ B_{j,i}^{-1}\}_{1\le i,j\le I_{\hbar}},\quad\scrG_{\hbar}=\{g_{j,i}\}_{1\le i,j\le I_{\hbar}}\label{eq:A_setting_for_X_and_G}
\end{equation}
satisfy respectively the conditions (C1), (C2) in Section \ref{ss:trumcation}
and (G1), (G2), (G3) in Section \ref{ss:nonlinear}. 
\end{proof}

\subsection{The main propositions\label{sub:The-main-propositions}}

In this subsection, we give two key propositions which will give Theorem
\ref{thm:More-detailled-description_of_spectrum} as a consequence.
To state the propositions, we introduce the projection operators 
\begin{equation}
{\boldt}_{\hbar}^{(k)}:\bigoplus_{i=1}^{I_{\hbar}}\mathcal{H}_{\hbar}^{r}(\real^{2d})\to\bigoplus_{i=1}^{I_{\hbar}}\mathcal{H}_{\hbar}^{r}(\real^{2d}),\qquad{\boldt}_{\hbar}^{(k)}((u_{i})_{i=1}^{I_{\hbar}})=(\romet_{\hbar}^{(k)}(u_{i}))_{i=1}^{I_{\hbar}},\label{eq:def_tk}
\end{equation}
 for $0\le k\le n$ and 
\begin{equation}
\tilde{\boldt}_{\hbar}:\bigoplus_{i=1}^{I_{\hbar}}\mathcal{H}_{\hbar}^{r}(\real^{2d})\to\bigoplus_{i=1}^{I_{\hbar}}\mathcal{H}_{\hbar}^{r}(\real^{2d}),\qquad\tilde{\boldt}_{\hbar}((u_{i})_{i=1}^{I_{\hbar}})=(\tilde{\romet}_{\hbar}(u_{i}))_{i=1}^{I_{\hbar}},\label{eq:def_tilde_t}
\end{equation}
which are just applications of the projection operators $\romet_{\hbar}^{(k)}$
and $\tilde{\romet}_{\hbar}$ introduced in (\ref{eq:def_t}) and
(\ref{eq:def_tt}) to each component. For brevity of notation, we
set 
\begin{equation}
{\boldt}_{\hbar}^{(n+1)}=\tilde{\boldt}_{\hbar}.\label{eq:q_for_n+1}
\end{equation}
 Then the set of operators $\{\boldt_{\hbar}^{(k)}\}_{k=0}^{n+1}$
are complete sets of mutually commuting projection operators .

The following Proposition shows that the projectors ${\boldt}_{\hbar}^{(k)}$
almost commute with the projector $(\boldI_{\hbar}\circ\boldI_{\hbar}^{*})$.
\begin{prop}
\label{pp1} There are constants $\epsilon>0$ and $C>0$, independent
of $\hbar$, such that the following holds: We have that 
\[
\left\Vert {\boldt}_{\hbar}^{(k)}\circ(\boldI_{\hbar}\circ\boldI_{\hbar}^{*})\right\Vert _{\bigoplus_{i=1}^{I_{\hbar}}\mathcal{H}_{\hbar}^{r,-}(\real^{2d})\rightarrow\bigoplus_{i=1}^{I_{\hbar}}\mathcal{H}_{\hbar}^{r,+}(\real^{2d})}<C,\quad\mbox{and }
\]
\[
\left\Vert (\boldI_{\hbar}\circ\boldI_{\hbar}^{*})\circ{\boldt}_{\hbar}^{(k)}\right\Vert _{\bigoplus_{i=1}^{I_{\hbar}}\mathcal{H}_{\hbar}^{r,-}(\real^{2d})\rightarrow\bigoplus_{i=1}^{I_{\hbar}}\mathcal{H}_{\hbar}^{r,+}(\real^{2d})}<C
\]
for $0\le k\le n$. (Hence the same statement holds as operators on
$\bigoplus_{i=1}^{I_{\hbar}}\mathcal{H}_{\hbar}^{r}(\real^{2d})$.)
Also we have, for the norm of the commutators, that 
\begin{equation}
\left\Vert \left[{\boldt}_{\hbar}^{(k)},(\boldI_{\hbar}\circ\boldI_{\hbar}^{*})\right]\right\Vert _{\bigoplus_{i=1}^{I_{\hbar}}\mathcal{H}_{\hbar}^{r,-}(\real^{2d})\to\bigoplus_{i=1}^{I_{\hbar}}\mathcal{H}_{\hbar}^{r,+}(\real^{2d})}\le C\hbar^{\epsilon}\label{eqn:com}
\end{equation}
for $0\le k\le n$. \end{prop}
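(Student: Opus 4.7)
The plan is to analyze $\boldI_{\hbar}\circ\boldI_{\hbar}^{*}$ as a matrix of operators between local chart components, exploit the local factorization of Proposition \ref{prop:Local-charts}(5) to isolate an affine isometric piece that commutes exactly with $\boldt_{\hbar}^{(k)}$, apply the commutator estimate of Corollary \ref{cor:L_g_almost_commutes_with_t^k} to the remaining near-identity piece, and sum entrywise via a Schur-type argument based on the bounded-multiplicity condition (\ref{eq:bound_on_i_sim_j}). First I would compute, for each pair $(i,j)$ with $U_{i}\cap U_{j}\neq\emptyset$, the $(j,i)$-component of $\boldI_{\hbar}\circ\boldI_{\hbar}^{*}$ directly from (\ref{eq:local_data_ui}) and (\ref{eq:expression_of_I*}); using equivariance, this is the Euclidean prequantum transfer operator $\mathcal{L}_{\kappa_{j,i}}$ for the chart change $\kappa_{j,i}=\kappa_{j}^{-1}\circ\kappa_{i}$, sandwiched between multiplication by the cutoffs $\chi_{\hbar}$ and $\psi_{j}$. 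Using the factorization $\kappa_{j,i}=A_{j,i}^{-1}\circ g_{j,i}$ from Proposition \ref{prop:Local-charts}(5), this rewrites as
\[
(\boldI_{\hbar}\boldI_{\hbar}^{*})_{j,i}=\mathcal{L}_{A_{j,i}^{-1}}\circ\mathcal{L}_{g_{j,i}}\circ\multiplication(\varphi_{j,i})
\]
for some $\varphi_{j,i}$ belonging to a family $\mathscr{X}_{\hbar}$ satisfying conditions (C1)--(C2) of Subsection \ref{ss:trumcation}, with $g_{j,i}$ in a family $\mathscr{G}_{\hbar}$ satisfying conditions (G1)--(G3) of Subsection \ref{ss:nonlinear}.

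Next I would prove the commutator bound (\ref{eqn:com}) entrywise. By Lemma \ref{lm:invariance_wrt_translation} and Remark \ref{rem:Aij}, the affine isometry $\mathcal{L}_{A_{j,i}^{-1}}$ is unitary on $\mathcal{H}_{\hbar}^{r,\pm}(\real^{2d})$ and commutes exactly with $t_{\hbar}^{(k)}$, hence
\[
[t_{\hbar}^{(k)},\,(\boldI_{\hbar}\boldI_{\hbar}^{*})_{j,i}]=\mathcal{L}_{A_{j,i}^{-1}}\circ\bigl[t_{\hbar}^{(k)},\,\mathcal{L}_{g_{j,i}}\circ\multiplication(\varphi_{j,i})\bigr].
\]
Corollary \ref{cor:L_g_almost_commutes_with_t^k}, applied to the families $\mathscr{X}_{\hbar}$ and $\mathscr{G}_{\hbar}$ identified above, bounds the inner commutator in $\mathcal{H}_{\hbar}^{r}$ operator norm by $C\hbar^{\epsilon}$; I would strengthen this to a bound from $\mathcal{H}_{\hbar}^{r,-}$ to $\mathcal{H}_{\hbar}^{r,+}$ by revisiting the proofs of Lemma \ref{lm:L_g_almost_identity} and Lemma \ref{cor:XT_exchange} and noting that each ingredient used there (namely Proposition \ref{lm:L_g_Y_almost_identity}, Proposition \ref{pp:bdd_g}, Lemma \ref{lem:boundedness_of_calY} and Lemma \ref{YT_exchange}) either already provides the asymmetric bound or admits one by the same proof, as noted in the remark after Corollary \ref{cor:transpose}. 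With this entrywise estimate in hand, (\ref{eqn:com}) follows by a Schur-type summation (of the kind already used in Lemma \ref{lm:trace_basic}), since by (\ref{eq:bound_on_i_sim_j}) the matrix has at most $C_{0}$ nonzero entries in each row and column.

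For the two non-commutator bounds, I would use the identity
\[
\boldt_{\hbar}^{(k)}\circ\boldI_{\hbar}\boldI_{\hbar}^{*}=\boldI_{\hbar}\boldI_{\hbar}^{*}\circ\boldt_{\hbar}^{(k)}+[\boldt_{\hbar}^{(k)},\boldI_{\hbar}\boldI_{\hbar}^{*}]
\]
together with the chain $\bigoplus\mathcal{H}_{\hbar}^{r,-}\xrightarrow{\boldt_{\hbar}^{(k)}}\bigoplus\mathcal{H}_{\hbar}^{r,+}\xrightarrow{\boldI_{\hbar}\boldI_{\hbar}^{*}}\bigoplus\mathcal{H}_{\hbar}^{r}$ coming from Corollary \ref{lm:pik_rpm} and Lemma \ref{lm:IIbdd}, closed up by the key observation that on $\mathrm{Im}\,\boldt_{\hbar}^{(k)}$ the norms $\mathcal{H}_{\hbar}^{r}$ and $\mathcal{H}_{\hbar}^{r,+}$ are uniformly equivalent in $\hbar$: under the isomorphism $\mathcal{U}^{-1}$ of (\ref{eq:def_U_operator}), the image of $t_{\hbar}^{(k)}$ corresponds to polynomials of degree $k$ in $\zeta_{p}$, whose Bargmann transforms are essentially supported on the ball $|\zeta|\lesssim\sqrt{k\hbar}$ on which all of $\mathcal{W}_{\hbar}^{r,\pm}$ and $\mathcal{W}_{\hbar}^{r}$ are comparable to $1$. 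The reversed product $(\boldI_{\hbar}\boldI_{\hbar}^{*})\circ\boldt_{\hbar}^{(k)}$ is handled in exactly the same way. The main technical obstacle is the upgrade of the commutator estimate from $\mathcal{H}_{\hbar}^{r}\to\mathcal{H}_{\hbar}^{r}$ to $\mathcal{H}_{\hbar}^{r,-}\to\mathcal{H}_{\hbar}^{r,+}$ described in the previous paragraph, which requires careful bookkeeping at each step of the proofs of Section \ref{sec:Nonlinear-prequantum-maps}.
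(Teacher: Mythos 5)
Your proposal is correct and takes essentially the same route as the paper: the commutator estimate is proved entrywise by factoring out the affine isometries $A_{j,i}$ (Remark \ref{rem:Aij}), applying Corollary \ref{cor:L_g_almost_commutes_with_t^k} (and Corollary \ref{cor:XT_exchange-1}) in their $\mathcal{H}_{\hbar}^{r,\pm}$ versions, and summing over the finitely overlapping charts via (\ref{eq:bound_on_i_sim_j}). Your norm-equivalence observation on $\mathrm{Im}\,\boldt_{\hbar}^{(k)}$ --- an immediate consequence of Corollary \ref{lm:pik_rpm} together with the fact that $\romet_{\hbar}^{(k)}$ is a projection --- is a sound way to close the first two inequalities and usefully spells out what the paper's terse composition argument leaves implicit; note only that $\boldI_{\hbar}\boldI_{\hbar}^{*}\circ\boldt_{\hbar}^{(k)}$ then follows most directly by subtracting the commutator from the bound on $\boldt_{\hbar}^{(k)}\circ\boldI_{\hbar}\boldI_{\hbar}^{*}$, rather than by repeating the argument.
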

\begin{proof}
From Lemma \ref{lm:IIbdd}, $\boldI_{\hbar}\circ\boldI_{\hbar}^{*}$
are bounded as operators from $\bigoplus_{i=1}^{I_{\hbar}}\mathcal{H}_{\hbar}^{r,+}(\real^{2d})$
to $\bigoplus_{i=1}^{I_{\hbar}}\mathcal{H}_{\hbar}^{r}(\real^{2d})$
(resp. from $\bigoplus_{i=1}^{I_{\hbar}}\mathcal{H}_{\hbar}^{r}(\real^{2d})$
to $\bigoplus_{i=1}^{I_{\hbar}}\mathcal{H}_{\hbar}^{r,-}(\real^{2d})$)
and the operator norm is bounded by a constant independent of $\hbar$.
From Lemma \ref{lm:pik_rpm}, so are the operators $\boldt^{(k)}$
as operators from $\bigoplus_{i=1}^{I_{\hbar}}\mathcal{H}_{\hbar}^{r,-}(\real^{2d})$
to $\bigoplus_{i=1}^{I_{\hbar}}\mathcal{H}_{\hbar}^{r}(\real^{2d})$
(resp. from $\bigoplus_{i=1}^{I_{\hbar}}\mathcal{H}_{\hbar}^{r}(\real^{2d})$
to $\bigoplus_{i=1}^{I_{\hbar}}\mathcal{H}_{\hbar}^{r,+}(\real^{2d})$).
Hence we obtain the first two inequalities. To prove (\ref{eqn:com}),
we take $\mathbf{u}=(u_{i})\in\bigoplus_{i=1}^{I_{\hbar}}\mathcal{H}_{\hbar}^{r,-}(\real^{2d})$
arbitrarily. From the definition, we have 
\begin{align*}
\boldt^{(k)}\circ(\boldI_{\hbar}\circ\boldI_{\hbar}^{*})(\mathbf{u}) & =\left(\sum_{i:U_{i}\cap U_{j}\neq\emptyset}\romet_{\hbar}^{(k)}\circ\multiplication(\psi_{j})\circ\prequantumL_{\kappa_{j,i}}\circ\multiplication(\chi_{\hbar})(u_{i})\right)_{j=1}^{I_{\hbar}}\intertext{and}(\boldI_{\hbar}\circ\boldI_{\hbar}^{*})\circ\boldt^{(k)}(\mathbf{u}) & =\left(\sum_{i:U_{i}\cap U_{j}\neq\emptyset}\multiplication(\chi_{j})\circ\prequantumL_{\kappa_{j,i}}\circ\multiplication(\chi_{\hbar})\circ\romet_{\hbar}^{(k)}(u_{i})\right)_{j=1}^{I_{\hbar}}.
\end{align*}
Applying Corollary \ref{cor:L_g_almost_commutes_with_t^k} and Corollary
\ref{cor:XT_exchange-1} to each components with the setting (\ref{eq:scrXG_setting})
and recalling Remark \ref{rem:Aij} and (\ref{eq:bound_on_i_sim_j}),
we obtain (\ref{eqn:com}). 
\end{proof}
The next Proposition stated for ${\boldF}_{\hbar}$ is now very close
to Theorem \ref{thm:More-detailled-description_of_spectrum}. 
\begin{prop}
\label{prop:key_proposition2} There are constants $\epsilon>0$ and
$C>0$ independent of $\hbar$ such that 
\begin{equation}
\left\Vert \left[{\boldF}_{\hbar},{\boldt}_{\hbar}^{(k)}\right]\right\Vert _{\bigoplus_{i=1}^{I_{\hbar}}\mathcal{H}_{\hbar}^{r}(\real^{2d})\to\bigoplus_{i=1}^{I_{\hbar}}\mathcal{H}_{\hbar}^{r}(\real^{2d})}\le C\hbar^{\epsilon}\quad\mbox{for }1\le k\le n+1.\label{eqn:com2}
\end{equation}
Further there exists a constant $C_{0}>0$, which is independent of
$f$, $V$ and $\hbar$, such that 
\begin{enumerate}
\item For $0\le k\le n+1$, it holds
\[
\!\!\!\left\Vert {\boldt}_{\hbar}^{(k)}\circ{\boldF}_{\hbar}\circ{\boldt}_{\hbar}^{(k)}\right\Vert _{\bigoplus_{i=1}^{I_{\hbar}}\mathcal{H}_{\hbar}^{r}(\real^{2d})\to\bigoplus_{i=1}^{I_{\hbar}}\mathcal{H}_{\hbar}^{r}(\real^{2d})}\le C_{0}\sup\left(|e^{V}|\|Df|_{E_{u}}\|_{\min}^{-k}|\det Df|_{E_{u}}|^{-1/2}\right)
\]

\item If $\boldu\in\bigoplus_{i=1}^{I_{\hbar}}\mathcal{H}_{\hbar}^{r}(\real^{2d})$
satisfies $\boldI_{\hbar}\circ\boldI_{\hbar}^{*}(\boldu)=\boldu$
and 
\[
\|\mathbf{u}-({\boldI_{\hbar}}\circ\boldI_{\hbar}^{*})\circ{\boldt}_{\hbar}^{(k)}(\mathbf{u})\|_{\mathcal{H}_{\hbar}^{r}}<\|\mathbf{u}\|_{\mathcal{H}_{\hbar}^{r}}/2\quad\mbox{for some \ensuremath{0\le k\le n},}
\]
 then we have 
\[
\|{\boldt}_{\hbar}^{(k)}\circ{\boldF}_{\hbar}\circ{\boldt}_{\hbar}^{(k)}(\mathbf{u})\|_{\mathcal{H}_{\hbar}^{r}}\ge C_{0}^{-1}\cdot\inf\left(|e^{V}|\|Df|_{E_{u}}\|_{\max}^{-k}|\det Df|_{E_{u}}|^{-1/2}\right)\cdot\|\mathbf{u}\|_{\mathcal{H}_{\hbar}^{r}}.
\]

\end{enumerate}
\end{prop}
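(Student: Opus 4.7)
My strategy is to reduce each component $\boldF_{j,i}$ of $\boldF_{\hbar}$ to the linear hyperbolic model of Proposition \ref{prop:prequatum_op_for_hyp_linear}, using the factorization
$$\boldF_{j,i}=\prequantumL^{(0)}\circ\prequantumL^{(1)}\circ\prequantumL^{(2)}$$
introduced in the proof of Lemma \ref{lm:bounded_F}. Here $\prequantumL^{(0)}=\prequantumL_{a_{j,i}}$ is a translation with $a_{j,i}\in\mathcal{A}$, $\prequantumL^{(2)}=\prequantumL_{B_{j,i}}$ is the prequantum operator of the hyperbolic linearization $B_{j,i}=\mathrm{diag}(A_{j,i},{}^{t}A_{j,i}^{-1})$, and $\prequantumL^{(1)}$ is the multiplication by $(e^{V\circ f\circ\kappa_{i}}\cdot\psi_{j,i}\cdot\chi_{\hbar})\circ B_{j,i}^{-1}$ followed by $\prequantumL_{g_{j,i}}$ with $g_{j,i}\in\scrG_{\hbar}$ close to the identity. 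By the choice of coordinates in Proposition \ref{prop:Local-charts}, $A_{j,i}$ is conjugate by an isometry to $Df_{m_{i}}|_{E_{u}}$, so $\|A_{j,i}\|_{\min/\max}=\|Df_{m_{i}}|_{E_{u}}\|_{\min/\max}$ and $|\det A_{j,i}|=|\det Df_{m_{i}}|_{E_{u}}|$.

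\emph{Commutator estimate (\ref{eqn:com2}).} The translation $\prequantumL^{(0)}$ commutes with $\romet_{\hbar}^{(k)}$ exactly by Lemma \ref{lm:invariance_wrt_translation} and Remark \ref{rem:Aij}. The linear hyperbolic operator $\prequantumL^{(2)}$ preserves the decomposition $\mathcal{H}_{\hbar}^{r}=H'_{0}\oplus\cdots\oplus H'_{n}\oplus\widetilde{H}'$ by Proposition \ref{prop:prequatum_op_for_hyp_linear}, hence commutes exactly with $\romet_{\hbar}^{(k)}$. Applying Corollary \ref{cor:L_g_almost_commutes_with_t^k} with the families in (\ref{eq:A_setting_for_X_and_G}) yields $\|[\prequantumL^{(1)},\romet_{\hbar}^{(k)}]\|\le C\hbar^{\epsilon}$ uniformly in $i,j$. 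Telescoping these three estimates and summing by Cauchy--Schwarz with the aid of the bounded overlap (\ref{eq:arrow}) yields (\ref{eqn:com2}).

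\emph{Upper bound (1).} Proposition \ref{prop:prequatum_op_for_hyp_linear}(1) applied on the image of $\romet_{\hbar}^{(k)}$ gives $\|\prequantumL^{(2)}u\|\le C_{0}\|Df_{m_{i}}|_{E_{u}}\|_{\min}^{-k}|\det Df_{m_{i}}|_{E_{u}}|^{-1/2}\|u\|$; Lemma \ref{lem:multipliacation_by_Psi_ji} bounds the multiplication in $\prequantumL^{(1)}$ by $e^{V(m_{j})}+O(\hbar^{\theta})$; Proposition \ref{pp:bdd_g} bounds $\prequantumL_{g_{j,i}}\circ\multiplication(\chi_{\hbar})$ uniformly; and $\prequantumL^{(0)}$ is an isometry. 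Their composition gives a per-component bound of the form $C_{0}\cdot e^{V(m_{j})}\|Df_{m_{i}}|_{E_{u}}\|_{\min}^{-k}|\det Df_{m_{i}}|_{E_{u}}|^{-1/2}$, which is dominated by the supremum in the statement up to a constant (using $C^{0}$-continuity of $V$ and $Df|_{E_{u}}$ on the scale $\hbar^{1/2-\theta}$). Summing squared norms over $j$ and $i$ using (\ref{eq:arrow}), (\ref{eq:bound_on_i_sim_j}), and Cauchy--Schwarz yields claim (1). The case $k=n+1$ proceeds identically from Proposition \ref{prop:prequatum_op_for_hyp_linear}(2), whose bound $C_0\max\{\|A\|_{\min}^{-(n+1)}|\det A|^{-1/2},\|A\|_{\min}^{-r}|\det A|^{1/2}\}$ is controlled by the same expression for $r$ large enough.

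\emph{Lower bound (2).} The hypothesis on $\mathbf{u}$ implies that $\mathbf{v}:=(\boldI_{\hbar}\circ\boldI_{\hbar}^{*})\circ\romet_{\hbar}^{(k)}(\mathbf{u})$ satisfies $\|\mathbf{v}\|\ge\|\mathbf{u}\|/2$, and by Proposition \ref{pp1} we may assume its local components $v_{i}$ lie in $\mathrm{Im}\,\romet_{\hbar}^{(k)}$ up to $O(\hbar^{\epsilon})$ error. The matching lower bound of Proposition \ref{prop:prequatum_op_for_hyp_linear}(1), namely $\|\prequantumL^{(2)}v_{i}\|\ge C_{0}^{-1}\|Df_{m_{i}}|_{E_{u}}\|_{\max}^{-k}|\det Df_{m_{i}}|_{E_{u}}|^{-1/2}\|v_{i}\|$, together with a corresponding lower bound $e^{V(m_{j})}-O(\hbar^{\theta})$ on the multiplication part (Lemma \ref{lem:multipliacation_by_Psi_ji}) and unitarity of $\prequantumL^{(0)}$, furnishes a chart-wise lower bound. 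The main obstacle, and the most delicate part of the plan, is to assemble these local lower bounds into a global one on $\|\romet_{\hbar}^{(k)}\circ\boldF_{\hbar}\circ\romet_{\hbar}^{(k)}(\mathbf{u})\|$ despite the mixing of neighboring charts by the partition of unity. For this I will use the pseudo-local property of Lemma \ref{lm:pseudo_local_property} to establish near-orthogonality of contributions from non-overlapping charts in the anisotropic Sobolev norm, and combine with (\ref{eqn:com2}) and Proposition \ref{pp1} to reapply $\romet_{\hbar}^{(k)}$ at the output with controlled $O(\hbar^{\epsilon})$ error.
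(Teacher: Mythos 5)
Your decomposition $\boldF_{j,i}=\prequantumL^{(0)}\circ\prequantumL^{(1)}\circ\prequantumL^{(2)}$ and the reduction to Proposition~\ref{prop:prequatum_op_for_hyp_linear} are exactly the paper's route, and the commutator estimate~\eqref{eqn:com2} and the strategy for the upper bound (1) follow the paper's proof closely. Two issues, one minor and one substantive.

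Minor: for the upper bound with $k=n+1$ you say it ``proceeds identically.'' It does not quite: the key observation (\ref{eq:LttL}) used to replace $\prequantumL^{(1)}\circ\romet_{\hbar}^{(k)}$ by the constant $e^{V_j}$ times a multiplication operator is only proved for $0\le k\le n$, and fails for $\tilde{\romet}_{\hbar}$. The paper handles $k=n+1$ by interposing an auxiliary hyperbolic map $B_0$ and rewriting $\boldF_{j,i}=\mathcal{L}^{(0)}\circ\mathcal{L}_{B_0}\circ\widetilde{\mathcal{L}}^{(1)}\circ\widetilde{\mathcal{L}}^{(2)}$ so that the nonlinear block $\widetilde{\mathcal{L}}^{(1)}$ can be estimated $\mathcal{H}_{\hbar}^{r}\to\mathcal{H}_{\hbar}^{r,-}$ via Proposition~\ref{pp:bdd_g}, and $\mathcal{L}_{B_0}$ carries $\mathcal{H}_{\hbar}^{r,-}\to\mathcal{H}_{\hbar}^{r,+}$. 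This is a genuine extra step you would need to supply.

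Substantive gap: your plan for the lower bound (2) leans on Lemma~\ref{lm:pseudo_local_property} to get near-orthogonality of contributions ``from non-overlapping charts.'' But the dangerous cross-terms are precisely those from \emph{overlapping} contributions: for a fixed target index $j$, the pieces $v_{j,i}$ and $v_{j,i'}$ with $i\neq i'$, $i\to j$, $i'\to j$, both live on the same chart $j$ and have strongly overlapping supports of size $\hbar^{1/2-\theta}$, so the pseudo-local property gives nothing there, and a priori $\mathrm{Re}(v_{j,i},v_{j,i'})$ could be negative enough to destroy the lower bound. The paper resolves exactly this by a positivity argument: rewriting both $v_{j,i}$ and $v_{j,i'}$ over a common coordinate chart $i$ (via $\kappa_{i,i'}$), commuting projections and multiplications up to $O(\hbar^{\epsilon})$ using Corollaries~\ref{cor:XT_exchange-1} and~\ref{cor:L_g_almost_commutes_with_t^k}, invoking Corollary~\ref{cor:transpose} to move the cutoffs onto one side, and thereby expressing the cross-term (up to $O(\hbar^{\epsilon})$) as a genuine squared norm $\|\multiplication(\cdots\sqrt{\psi_{i}\circ f_{j,i}^{-1}\cdot\psi_{i'}\circ f_{j,i'}^{-1}})\circ\prequantumL_{f_{j,i}}\circ\romet_{\hbar}^{(k)}(\cdots)\|^{2}\ge 0$. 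Without this square-completion idea the cross-terms are not controlled, so the lower bound does not close. You do need Lemma~\ref{lm:pseudo_local_property} in the proof (it controls distant, non-overlapping $i,j$ pairs, as the paper notes in Remark~\ref{rm:pseudolocal}), but it plays a secondary role; the overlapping cross-term positivity is the crux, and it is missing from your plan.
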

\begin{proof}
We recall the argument in the proof of Lemma \ref{lm:bounded_F},
in particular, the expression (\ref{eq:expression_F_ij}) of the operator
$\boldF_{ij}$. Then we observe that, for each $i,j$ such that $i\to j$, 
\begin{description}
\item [{(i)}] From Proposition \ref{prop:prequatum_op_for_hyp_linear}
and Lemma \ref{lm:invariance_wrt_translation}, the projection operators
$\romet_{\hbar}^{(k)}$ for $0\le k\le n$ and $\tilde{\romet}_{\hbar}$
commute with the operator $\prequantumL^{(0)}$ and $\prequantumL^{(2)}$
(defined in (\ref{eq:expression_F_ij})).
\item [{(ii)}] From Lemma \ref{lm:invariance_wrt_translation}, the operator
$\prequantumL^{(0)}$ is a unitary operator on $\mathcal{H}_{\hbar}^{r}(\real^{2d})$
and also on $\mathcal{H}_{\hbar}^{r,\pm}(\real^{2d})$.
\item [{(iii)}] From Proposition \ref{lm:L_g_Y_almost_identity}, the operator
$\mathcal{L}^{(1)}$ extends to a bounded operator from $ $$\mathcal{H}_{\hbar}^{r,+}(\real^{2d})$
to $\mathcal{H}_{\hbar}^{r}(\real^{2d})$ (resp. from $ $$\mathcal{H}_{\hbar}^{r}(\real^{2d})$
to $\mathcal{H}_{\hbar}^{r,-}(\real^{2d}))$ and the operator norm
is bounded by a constant $C_{0}$, provided that $\hbar$ is sufficiently
small. 
\item [{(iv)}] Applying Proposition \ref{prop:prequatum_op_for_hyp_linear}
to $\prequantumL^{(2)}$, we see that the operator $\prequantumL^{(2)}$
is a bounded operator on $\mathcal{H}_{\hbar}^{r}(\real^{2d})$ and
that 
\[
C_{0}^{-1}\|B_{j,i}|_{E^{+}}\|^{-k}\cdot|\det B_{j,i}|_{E^{+}}|^{-1/2}\le\frac{\|\prequantumL^{(2)}u\|_{\mathcal{H}_{\hbar}^{r}(\real^{2d})}}{\|u\|_{\mathcal{H}_{\hbar}^{r}(\real^{2d})}}\le C_{0}\|B_{j,i}^{-1}|_{E^{+}}\|^{k}\cdot|\det B_{j,i}|_{E^{+}}|^{-1/2}
\]
for $0\neq u\in H'_{k}:=\Im\romet_{\hbar}^{(k)}$ and for $0\le k\le n$,
where $E^{+}=\real^{2d}\oplus\{0\}$. Further we have 
\[
\|\prequantumL^{(2)}u\|_{\mathcal{H}_{\hbar}^{r}(\real^{2d})}\le C_{0}\|B_{j,i}^{-1}|_{E^{+}}\|^{n+1}|\det B_{j,i}|_{E^{+}}|^{-1/2}\|u\|_{\mathcal{H}_{\hbar}^{r}(\real^{2d})}
\]
for $u\in\widetilde{H}':=\Im\tilde{t}_{\hbar}$.
\item [{(v)}] By simple comparison, we have 
\begin{multline*}
C_{0}^{-1}\cdot\inf\left(|e^{V}|\|Df|_{E_{u}}\|_{\max}^{-k}|\det Df|_{E_{u}}|^{-1/2}\right)\\
<e^{V_{j}}\cdot\|B_{j,i}^{-1}|_{E^{+}}\|^{k}\cdot|\det B_{j,i}|_{E^{+}}|^{-1/2}\\
<C_{0}\sup\left(|e^{V}|\|Df|_{E_{u}}\|_{\min}^{-k}|\det Df|_{E_{u}}|^{-1/2}\right).
\end{multline*}

\item [{(vi)}] Applying Lemma \ref{lm:L_g_almost_identity} to the setting
(\ref{eq:A_setting_for_X_and_G}) and Lemma \ref{lem:multipliacation_by_Psi_ji},
we have 
\begin{equation}
\|\prequantumL^{(1)}\circ\romet_{\hbar}^{(k)}-e^{V_{j}}\cdot\multiplication\left((\psi_{j,i}\cdot\chi_{\hbar})\circ B_{j,i}^{-1}\right)\circ\romet_{\hbar}^{(k)}\|_{\mathcal{H}_{\hbar}^{r}(\real^{2d})}\le C\hbar^{\epsilon}\quad\mbox{for \ensuremath{0\le k\le n}}\label{eq:LttL}
\end{equation}
with some positive constants $C$ and $\epsilon$ independent of $\hbar$. 
\item [{(vii)}] Applying Corollary \ref{cor:L_g_almost_commutes_with_t^k}
to the setting (\ref{eq:A_setting_for_X_and_G}), we have that 
\[
\|[\prequantumL^{(1)},\romet_{\hbar}^{(k)}]\|_{\mathcal{H}_{\hbar}^{r}(\real^{2d})}\le C\hbar^{\epsilon}\quad\mbox{for \ensuremath{0\le k\le n+1}}
\]
with setting $t_{\hbar}^{(n+1)}=\tilde{t}_{\hbar}$ for the case $k=n+1$.
This is true with $\mathcal{H}_{\hbar}^{r}(\real^{2d})$ replaced
by $\mathcal{H}_{\hbar}^{r,\pm}(\real^{2d})$.
\end{description}
From the observations (i), (ii), (iv) and (vii) above, it follows
\[
\|[\boldF_{j,i},\romet_{\hbar}^{(k)}]\|_{\mathcal{H}_{\hbar}^{r}(\real^{2d})\to\mathcal{H}_{\hbar}^{r}(\real^{2d})}\le C\hbar^{\epsilon}\quad\mbox{for }0\le k\le n+1.
\]
 This, together with (\ref{eq:arrow}), implies (\ref{eqn:com2}).

We prove Claim (1). Take $\boldu=(u_{i})_{i=1}^{I_{\hbar}}\in\bigoplus_{i=1}^{I_{\hbar}}\mathcal{H}_{\hbar}^{r}(\real^{2d})$
arbitrarily. Let $0\le k\le n+1$ and set 
\begin{equation}
v_{j,i}=\romet_{\hbar}^{(k)}\circ\boldF_{j,i}\circ\romet_{\hbar}^{(k)}(u_{i}),\qquad u_{j,i}=\psi_{j,i}\cdot\chi_{\hbar}\cdot u_{i}=(\psi_{j}\circ f_{j,i})\cdot\chi_{\hbar}\cdot u_{i}\label{eq:def_uij_vij}
\end{equation}
 for $1\le i,j\le I_{\hbar}$ such that $i\to j$. Suppose that $0\le k\le n$.
Then, using the expression (\ref{eq:expression_F_ij}) of $\boldF_{j,i}$,
we obtain, by (vi) and Corollary \ref{cor:XT_exchange-1}, 
\begin{align*}
\|v_{j,i}\|_{\mathcal{H}_{\hbar}^{r}(\real^{2d})} & =\|t_{\hbar}^{(k)}\circ\mathcal{L}^{(1)}\circ\mathcal{L}^{(2)}\circ t_{\hbar}^{(k)}u_{i}\|_{\mathcal{H}_{\hbar}^{r}(\real^{2d})}\\
 & =e^{V_{j}}\cdot\|\mathcal{L}^{(2)}\circ\multiplication(\psi_{j,i}\cdot\chi_{\hbar})\circ t_{\hbar}^{(k)}(u_{i})\|_{\mathcal{H}_{\hbar}^{r}(\real^{2d})}+\mathcal{O}(\hbar^{\epsilon}\|u_{i}\|_{\mathcal{H}_{\hbar}^{r}(\real^{2d})})\\
 & =e^{V_{j}}\cdot\|\mathcal{L}^{(2)}\circ t_{\hbar}^{(k)}(u_{j,i})\|_{\mathcal{H}_{\hbar}^{r}(\real^{2d})}+\mathcal{O}(\hbar^{\epsilon}\cdot\|u_{i}\|_{\mathcal{H}_{\hbar}^{r}(\real^{2d})})
\end{align*}
where $\mathcal{O}(\hbar^{\epsilon}\cdot\|u_{i}\|_{\mathcal{H}_{\hbar}^{r}(\real^{2d})})$
denotes positive terms that are bounded by $C\hbar^{\epsilon}\cdot\|u_{i}\|_{\mathcal{H}_{\hbar}^{r}(\real^{2d})}$.
Hence, from (iv), we get the estimates 
\begin{equation}
\|v_{j,i}\|_{\mathcal{H}_{\hbar}^{r}}\le C_{0}\cdot e^{V_{j}}\cdot\|B_{j,i}|_{E^{+}}\|_{\min}^{-k}\cdot|\det B_{j,i}|_{E^{+}}|^{-1/2}\cdot\|\romet_{\hbar}^{(k)}(u_{j,i})\|_{\mathcal{H}_{\hbar}^{r}(\real^{2d})}+\mathcal{O}(\hbar^{\epsilon}\cdot\|u_{i}\|_{\mathcal{H}_{\hbar}^{r}(\real^{2d})})\label{eq:vijuij}
\end{equation}
 and 
\begin{equation}
\|v_{j,i}\|_{\mathcal{H}_{\hbar}^{r}}\ge C_{0}^{-1}\cdot e^{V_{j}}\cdot\|B_{j,i}|_{E^{+}}\|_{\max}^{-k}\cdot|\det B_{j,i}|_{E^{+}}|^{-1/2}\cdot\|\romet_{\hbar}^{(k)}(u_{j,i})\|_{\mathcal{H}_{\hbar}^{r}(\real^{2d})}-\mathcal{O}(\hbar^{\epsilon}\cdot\|u_{i}\|_{\mathcal{H}_{\hbar}^{r}(\real^{2d})})\label{eq:vijuij2}
\end{equation}
for $0\le k\le n$. 

Actually the upper estimate (\ref{eq:vijuij}) can be strengthen by
modifying the argument above, so that it also holds for $k=n+1$.
(Note that the argument above is not true for $k=n+1$, because (\ref{eq:LttL})
does not hold in that case.) Indeed we can show that 
\begin{equation}
\|v_{j,i}\|_{\mathcal{H}_{\hbar}^{r,+}(\real^{2d})}\le C_{0}e^{V_{j}}\|B_{j,i}|_{E^{+}}\|_{\min}^{-k}|\det B_{j,i}|_{E^{+}}|^{-1/2}\cdot\|\romet_{\hbar}^{(k)}(u_{j,i})\|_{\mathcal{H}_{\hbar}^{r}(\real^{2d})}\label{eq:vijuij3}
\end{equation}
for \emph{all} $0\le k\le n+1$. Let $B_{0}:\real^{2d}\to\real^{2d}$
be the linear map defined by 
\[
B_{0}(x_{+},x_{-})=(\lambda_{0}\cdot x_{+},\lambda_{0}^{-1}\cdot x_{-})\quad\mbox{ for \ensuremath{(x_{+},x_{-})\in\real^{2d}=\real^{d}\oplus\real^{d}}}
\]
where $\lambda_{0}$ is an absolute constant greater than $9$. (Say
$\lambda_{0}=10$.) Then we write the operator $\boldF_{ij}$ as 
\[
\boldF_{ij}=\mathcal{L}^{(0)}\circ\mathcal{L}_{B_{0}}\circ\widetilde{\mathcal{L}}^{(1)}\circ\widetilde{\mathcal{L}}^{(2)}\quad\mbox{with setting }\widetilde{\mathcal{L}}^{(1)}=\prequantumL_{B_{0}^{-1}}\circ\mathcal{L}^{(1)}\circ\prequantumL_{B_{0}},\quad\widetilde{\mathcal{L}}^{(2)}=\prequantumL_{B_{0}^{-1}\circ B_{j,i}}.
\]
The operator $\prequantumL_{B_{0}}$ is a bounded operator from $\mathcal{H}_{\hbar}^{r,-}(\real^{2d})$
to $\mathcal{H}_{\hbar}^{r,+}(\real^{2d})$, from Lemma \ref{lm:boundedness_of_hyperbolic_linear_map}.
The operator $\widetilde{\mathcal{L}}^{(1)}$ is a bounded operator
from $\mathcal{H}_{\hbar}^{r}(\real^{2d})$ to $\mathcal{H}_{\hbar}^{r,-}(\real^{2d})$
and the operator norm is bounded by $C_{0}e^{V_{j}}$, from Proposition
\ref{pp:bdd_g}. And the observation (iv) holds true with $\mathcal{L}^{(2)}$
replaced by $\widetilde{\mathcal{L}}^{(2)}$. Hence we obtain (\ref{eq:vijuij3}).

From (\ref{eq:bound_on_i_sim_j}) in the choice of the coordinate
system $\{\kappa_{i}\}_{i=1}^{I_{\hbar}}$ (see Proposition \ref{prop:Local-charts})
and from Lemma \ref{lm:pseudo_local_property}, we have 
\[
\|\boldF\circ\boldt^{(k)}(\mathbf{u})\|_{\mathcal{H}_{\hbar}^{r}}^{2}=\left\Vert \left(\sum_{i:i\to j}v_{j,i}\right)_{j}\right\Vert _{\mathcal{H}_{\hbar}^{r}}^{2}\le C_{0}\sum_{i,j:i\to j}\|v_{j,i}\|_{\mathcal{H}_{\hbar}^{r}(\real^{2d})}^{2}+\mathcal{O}(\hbar^{\epsilon}\cdot\|\mathbf{u}\|_{\mathcal{H}_{\hbar}^{r}}^{2})
\]
 and 
\[
\sum_{i,j:i\to j}\|u_{j,i}\|_{\mathcal{H}_{\hbar}^{r}(\real^{2d})}^{2}\le C_{0}\|\mathbf{u}\|_{\mathcal{H}_{\hbar}^{r}}^{2},
\]
provided $\hbar$ is sufficiently small. Hence we obtain Claim 1 as
a consequence of (\ref{eq:vijuij3}) and the observation (v).
\begin{rem}
\label{rm:pseudolocal} Because of the inconvenient property of the
inner product $(\cdot,\cdot)_{\mathcal{H}_{\hbar}^{r}}$, noted in
the paragraph just before Lemma \ref{lm:pseudo_local_property}, the
two inequalities above are not an immediate consequence of the estimate
(\ref{eq:bound_on_i_sim_j}) on the intersection multiplicities of
the supports of $v_{j,i}$ and $u_{j,i}$. We have to use Lemma \ref{lm:pseudo_local_property}. 
\end{rem}
We prove Claim (2). We continue the argument in the proof of Claim
(1). Note that we already have the estimate (\ref{eq:vijuij2}) for
each $v_{j,i}$. Below we show that the functions $v_{j,i}$ do not
cancel out too much when we sum up them with respect to $i$ such
that $i\to j$. More precisely, we prove the estimate 
\begin{equation}
\sum_{i,i',j:i\to j,i'\to j,i\neq i'}\mathrm{Re}(v_{j,i},v_{j,i'})_{\mathcal{H}_{\hbar}^{r}(\real^{2d})}\ge-C\hbar^{\epsilon}\cdot\|\mathbf{u}\|_{\mathcal{H}^{r}}^{2}\label{eq:claim2}
\end{equation}
where $\sum_{i,i',j:i\to j,i'\to j,i\neq i'}$ denotes the sum over
$1\le i,i',j\le I_{\hbar}$ that satisfies $i\to j$, $i\to j'$ and
$i\neq i'$. For $1\le j\le I_{\hbar}$, let $I(j)$ be the set of
integers $1\le\ell\le I_{\hbar}$ such that there exists $1\le\ell',\ell''\le I_{\hbar}$
satisfying $U_{\ell}\cap U_{\ell'}\neq\emptyset$, $U_{\ell'}\cap U_{\ell''}\neq\emptyset$
and $\ell'\to j$. Note that we have
\begin{equation}
\max_{1\le j\le I_{\hbar}}\#I(j)\le C_{0}.\label{eq:cardinality_of_I(j)}
\end{equation}
Consider $1\le i,i',j\le I_{\hbar}$ that satisfies $i\to j$, $i\to j'$
and $i\neq i'$. We express $v_{j,i}$ as 
\[
v_{j,i}=\romet_{\hbar}^{(k)}\circ\multiplication(e^{V\circ\kappa_{j}}\cdot\psi_{j})\circ\prequantumL_{f_{j,i}}\circ\romet_{\hbar}^{(k)}\left(\psi_{i}\cdot\sum_{\ell\in I(j)}\prequantumL_{\kappa_{i,\ell}}(u_{\ell})\right).
\]
We can of course write $v_{j,i'}$ in the same form with $i$ replaced
by $i'$, but we rewrite it as 
\[
v_{j,i'}=\romet_{\hbar}^{(k)}\circ\multiplication(e^{V\circ\kappa_{j}}\cdot\psi_{j})\circ\prequantumL_{f_{j,i}}\circ\prequantumL_{{\kappa}_{i,i'}}\circ\romet_{\hbar}^{(k)}\left(\psi_{i'}\cdot\sum_{\ell\in I(j)}\prequantumL_{{\kappa}_{i',\ell}}(u_{\ell})\right).
\]
We change the order of operators on the right hand sides above, estimating
the commutators by Corollary \ref{cor:XT_exchange-1} and Corollary
\ref{cor:L_g_almost_commutes_with_t^k} and noting the relation $\kappa_{i,i'}\circ\kappa_{i',\ell}=\kappa_{i,\ell}$.
Then we get 
\[
\left\Vert v_{j,i}-\multiplication(e^{V\circ\kappa_{i'}}\cdot\psi_{j}\cdot\psi_{i}\circ f_{j,i}^{-1})\circ\prequantumL_{f_{j,i}}\circ\romet_{\hbar}^{(k)}\left(\sum_{\ell\in I(j)}\prequantumL_{\kappa_{i,\ell}}(u_{\ell})\right)\right\Vert _{\mathcal{H}_{\hbar}^{r}(\real^{2d})}\le C\hbar^{\epsilon}\sum_{\ell\in I(j)}\|u_{\ell}\|_{\mathcal{H}_{\hbar}^{r}(\real^{2d})}
\]
 and 
\[
\left\Vert v_{j,i'}-\multiplication(e^{V\circ\kappa_{i'}}\cdot\psi_{j}\cdot\psi_{i'}\circ f_{j,i'}^{-1})\circ\prequantumL_{f_{j,i}}\circ\romet_{\hbar}^{(k)}\left(\sum_{\ell\in I(j)}\prequantumL_{{\kappa}_{i,\ell}}(u_{\ell})\right)\right\Vert _{\mathcal{H}_{\hbar}^{r}(\real^{2d})}\le C\hbar^{\epsilon}\sum_{\ell\in I(j)}\|u_{\ell}\|_{\mathcal{H}_{\hbar}^{r}(\real^{2d})}.
\]
 Therefore, by Corollary \ref{cor:transpose}, we get 
\begin{align*}
\mathrm{Re}(v_{j,i},v_{j,i'}) & \ge\left\Vert \multiplication\left(e^{V\circ\kappa_{i'}}\psi_{j}\cdot\sqrt{\psi_{i}\circ f_{j,i}^{-1}\cdot\psi_{i'}\circ f_{j,i'}^{-1}}\right)\circ\prequantumL_{f_{j,i}}\circ\romet_{\hbar}^{(k)}\left(\sum_{\ell\in I(j)}\mathcal{L}_{{\kappa}_{i,\ell}}(u_{\ell})\right)\right\Vert _{\mathcal{H}_{\hbar}^{r}(\real^{2d})}^{2}\\
 & \qquad\qquad-C\hbar^{\epsilon}\cdot\sum_{\ell\in I(j)}\|u_{\ell}\|_{\mathcal{H}_{\hbar}^{r}(\real^{2d})}^{2}\\
 & \ge-C\hbar^{\epsilon}\cdot\sum_{\ell\in I(j)}\|u_{\ell}\|_{\mathcal{H}_{\hbar}^{r}(\real^{2d})}^{2}.
\end{align*}
Summing up the both sides of the inequality above for all $j,i,i'$
with $i\to j$, $i'\to j$ and $i\neq i'$ and using (\ref{eq:cardinality_of_I(j)}),
we obtain (\ref{eq:claim2}). 

From (\ref{eq:claim2}), (\ref{eq:vijuij2}) and the observation (iv),
we get
\begin{align}
 & \|{\boldF}_{\hbar}\circ\boldt^{(k)}(\mathbf{u})\|_{\mathcal{H}_{\hbar}^{r}}^{2}=\sum_{j}\sum_{i,i'}\Re(v_{j,i},v_{j,i'})_{\mathcal{H}_{\hbar}^{r}(\real^{2d})}\nonumber \\
 & \ge\sum_{i,j:i\to j}\|v_{j,i}\|_{\mathcal{H}_{\hbar}^{r}(\real^{2d})}^{2}-C\hbar^{\epsilon}\|\boldu\|_{\mathcal{H}_{\hbar}^{r}}^{2}\nonumber \\
 & \ge\sum_{i,j:i\to j}C_{0}^{-1}e^{V_{j}}\cdot\|B_{j,i}|_{E^{+}}\|_{\max}^{-k}|\det B_{j,i}|_{E^{+}}|^{-1/2}\|t_{\hbar}^{(k)}(u_{j,i})\|_{\mathcal{H}_{\hbar}^{r}(\real^{2d})}^{2}-C\hbar^{\epsilon}\|\boldu\|_{\mathcal{H}_{\hbar}^{r}}^{2}\nonumber \\
 & \ge C_{0}^{-1}\inf\left(|e^{V}|\|Df|_{E_{u}}\|_{\max}^{-k}|\det Df|_{E_{u}}|^{-1/2}\right)\sum_{i,j:i\to j}\|t_{\hbar}^{(k)}(u_{j,i})\|_{\mathcal{H}_{\hbar}^{r}(\real^{2d})}^{2}-C\hbar^{\epsilon}\|\boldu\|_{\mathcal{H}_{\hbar}^{r}}^{2}.\label{eq:estimate_on_F_t^k}
\end{align}
To finish the proof, we compare $\sum_{i,j:i\to j}\|t_{\hbar}^{(k)}(u_{j,i})\|_{\mathcal{H}_{\hbar}^{r}(\real^{2d})}^{2}$
and $\|\mathbf{u}\|_{\mathcal{H}_{\hbar}^{r}}^{2}$. To this end,
we use the assumptions in Claim (2), of course. From the assumption
and (\ref{eqn:com}) in Proposition \ref{pp1}, we have 
\begin{align}
\|\mathbf{u}\|_{\mathcal{H}_{\hbar}^{r}}\le2\cdot\|(\mathbf{I}_{\hbar}\circ\mathbf{I}_{\hbar}^{*})\circ\boldt_{\hbar}^{(k)}(\mathbf{u})\|_{\mathcal{H}_{\hbar}^{r}} & \le2\|\boldt_{\hbar}^{(k)}\circ(\mathbf{I}_{\hbar}\circ\mathbf{I}_{\hbar}^{*})(\boldu)\|_{\mathcal{H}_{\hbar}^{r}}+C\hbar^{\epsilon}\|\mathbf{u}\|_{\mathcal{H}_{\hbar}^{r}}\notag\\
 & =2\|\boldt_{\hbar}^{(k)}(\boldu)\|_{\mathcal{H}_{\hbar}^{r}}+C\hbar^{\epsilon}\|\mathbf{u}\|_{\mathcal{H}_{\hbar}^{r}}.\label{eq:Ftu3}
\end{align}
We also have, from Corollary \ref{cor:XT_exchange-1}, Lemma \ref{lm:pseudo_local_property}
and (\ref{eq:bound_on_i_sim_j}), that 
\begin{align*}
\|\boldt_{\hbar}^{(k)}(\boldu)\|_{\mathcal{H}_{\hbar}^{r}}^{2} & =\sum_{i}\|t_{\hbar}^{(k)}(u_{i})\|=\left\Vert \sum_{i}t_{\hbar}^{(k)}\left(\sum_{j:i\to j}u_{j,i}\right)\right\Vert _{\mathcal{H}_{\hbar}^{r}(\real^{2d})}^{2}\\
 & =\sum_{i}\left\Vert t_{\hbar}^{(k)}\left(\sum_{j:i\to j}\multiplication(\psi_{j,i}\cdot\chi_{\hbar})u_{i}\right)\right\Vert _{\mathcal{H}_{\hbar}^{r}(\real^{2d})}^{2}\\
 & \le\sum_{i}\left\Vert \sum_{j:i\to j}\multiplication(\psi_{j,i}\cdot\chi_{\hbar})\circ t_{\hbar}^{(k)}\left(u_{i}\right)\right\Vert _{\mathcal{H}_{\hbar}^{r}(\real^{2d})}^{2}+C\hbar^{\epsilon}\|\mathbf{u}\|_{\mathcal{H}_{\hbar}^{r}}^{2}\\
 & \le C_{0}\sum_{i,j:i\to j}\|\multiplication(\psi_{j,i}\cdot\chi_{\hbar})\circ t_{\hbar}^{(k)}(u_{j})\|_{\mathcal{H}_{\hbar}^{r}(\real^{2d})}^{2}+C\hbar^{\epsilon}\|\mathbf{u}\|_{\mathcal{H}_{\hbar}^{r}}^{2}\\
 & \leq C_{0}\sum_{i,j:i\to j}\|t_{\hbar}^{(k)}(u_{j,i})\|_{\mathcal{H}_{\hbar}^{r}(\real^{2d})}^{2}+C\hbar^{\epsilon}\|\mathbf{u}\|_{\mathcal{H}_{\hbar}^{r}}^{2}.
\end{align*}
We conclude Claim (2) from these inequalities and (\ref{eq:estimate_on_F_t^k}).
\end{proof}

\subsection{\label{sub:Proofs-of-Theorem_more_details}Proof of Theorem \ref{thm:More-detailled-description_of_spectrum} }

We finish the proof of Theorem \ref{thm:More-detailled-description_of_spectrum}.
Actually we have almost done with the essential part of the proof.
Below we give a formal argument to complete it. Let us begin with
introducing the operators 
\begin{equation}
\checktau_{\hbar}^{(k)}=\boldI_{\hbar}^{*}\circ\boldt_{\hbar}^{(k)}\circ\boldI_{\hbar}:\mathcal{H}_{\hbar}^{r}(P)\to\mathcal{H}_{\hbar}^{r}(P)\label{eq:def_Tau_chech}
\end{equation}
 for $0\le k\le n+1$. From Proposition \ref{pp1}, these are bounded
operators with operator norms bounded by a constant $C$ independent
of $\hbar$, and satisfy 
\begin{align}
 & \checktau_{\hbar}^{(1)}+\checktau_{\hbar}^{(2)}+\cdots+\checktau_{\hbar}^{(n)}+\checktau_{\hbar}^{(n+1)}=\mathrm{Id},\label{eq:sum_of_check_tau}\\
 & \|\checktau_{\hbar}^{(k)}\circ\checktau_{\hbar}^{(k)}-\checktau_{\hbar}^{(k)}\|_{\mathcal{H}_{\hbar}^{r}(P)}\le C\hbar^{\epsilon}\quad\mbox{for \ensuremath{0\le k\le n+1},}\quad\mbox{\mbox{and }}\label{eq:check_tau_almost_projection}\\
 & \|\checktau^{(k)}\circ\checktau^{(k')}\|_{\mathcal{H}_{\hbar}^{r}(P)}\le C\hbar^{\epsilon}\quad\mbox{for \ensuremath{0\le k,k'\le n+1}with \ensuremath{k\neq k'}}\label{eq:cTcT2}
\end{align}
 for some constants $\epsilon>0$ and $C>0$. 
\begin{lem}
\label{lem:trace_norm_tau} The operators $\check{\tau}_{\hbar}^{(k)}$,
$0\le k\le n$, are trace class operators. There exist constants $\epsilon>0$
and $C>0$, independent of $\hbar$, such that 
\[
\|\check{\tau}_{\hbar}^{(k)}:\mathcal{H}_{\hbar}^{r}(P)\to\mathcal{H}_{\hbar}^{r}(P)\|_{tr}\le C\hbar^{-d}
\]
and that 
\[
\|\check{\tau}_{\hbar}^{(k)}-\check{\tau}_{\hbar}^{(k)}\circ\check{\tau}_{\hbar}^{(k)}:\mathcal{H}_{\hbar}^{r}(P)\to\mathcal{H}_{\hbar}^{r}(P)\|_{tr}\le C\hbar^{-d+\epsilon}
\]
\end{lem}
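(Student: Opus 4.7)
The strategy is to reduce both claims to the trace-class estimates for the local model operators $t_\hbar^{(k)}$ on $\real^{2d}$ that we established in Section~\ref{sec:Nonlinear-prequantum-maps} (namely Corollary~\ref{cor:XT_pm}, Corollary~\ref{cor:XT_exchange-Tr}, and Lemma~\ref{lm:L_g_almost_identity-Tr}) and then assemble the estimates chart-by-chart, using the bounded intersection multiplicity $\sharp\{j\mid U_i\cap U_j\ne\emptyset\}\le C_0$ from (\ref{eq:bound_on_i_sim_j}) together with the bound $I_\hbar\le C_0\hbar^{-d(1-2\theta)}$ on the number of charts.

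First I would prove the trace-norm bound $\|\check{\tau}_\hbar^{(k)}\|_{tr}\le C\hbar^{-d}$. By definition $\check{\tau}_\hbar^{(k)}=\mathbf{I}_\hbar^*\circ \mathbf{t}_\hbar^{(k)}\circ \mathbf{I}_\hbar$, and since $\mathbf{t}_\hbar^{(k)}=(t_\hbar^{(k)})$ acts diagonally on $\bigoplus_i\mathcal{H}_\hbar^r(\real^{2d})$ while $\mathbf{I}_\hbar$ is built from the chart maps that multiply by $\psi_i$ on pulling back, one obtains the representation
\[
\check{\tau}_\hbar^{(k)}=\sum_{i=1}^{I_\hbar}R_i^*\circ\bigl(\multiplication(\psi_i)\circ t_\hbar^{(k)}\circ\multiplication(\psi_i)\bigr)\circ R_i
\]
where $R_i$ is the uniformly bounded pull-back/trivialization operator for the chart $\kappa_i$. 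Corollary~\ref{cor:XT_pm} (together with Lemma~\ref{lm:trace_basic}) gives
\[
\left\|\multiplication(\psi_i)\circ t_\hbar^{(k)}\circ\multiplication(\psi_i)\right\|_{tr}\le C\hbar^{-d}\int|\psi_i|\,dx+C\hbar^{-\theta d+\theta}.
\]
Summing in $i$, the main term is controlled by $C\hbar^{-d}\sum_i\int|\psi_i|\,dx\le C\hbar^{-d}\cdot\mathrm{Vol}_\omega(M)$ thanks to the partition of unity (\ref{eq:partition_of_unity}), while the remainder $I_\hbar\cdot\hbar^{-\theta d+\theta}\le C\hbar^{-d+\theta(d+1)}$ is negligible in comparison.

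Next I would prove $\|\check{\tau}_\hbar^{(k)}-(\check{\tau}_\hbar^{(k)})^2\|_{tr}\le C\hbar^{-d+\epsilon}$. The key algebraic identity is
\[
(\check{\tau}_\hbar^{(k)})^{2}-\check{\tau}_\hbar^{(k)}
=\mathbf{I}_\hbar^*\circ\mathbf{t}_\hbar^{(k)}\circ\bigl(\mathbf{I}_\hbar\mathbf{I}_\hbar^*-\mathrm{Id}\bigr)\circ\mathbf{t}_\hbar^{(k)}\circ\mathbf{I}_\hbar,
\]
obtained from $(\mathbf{t}_\hbar^{(k)})^2=\mathbf{t}_\hbar^{(k)}$. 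Using $\mathbf{I}_\hbar\mathbf{I}_\hbar^*\mathbf{I}_\hbar=\mathbf{I}_\hbar$ one rewrites the middle factor as a commutator, so that the difference equals $\mathbf{I}_\hbar^*\circ\mathbf{t}_\hbar^{(k)}\circ[\mathbf{I}_\hbar\mathbf{I}_\hbar^*,\,\mathbf{t}_\hbar^{(k)}]\circ\mathbf{I}_\hbar$ (up to terms of the same or better order). Now $\mathbf{I}_\hbar\mathbf{I}_\hbar^*$ is the matrix of operators whose $(j,i)$-entry, nonzero only when $U_i\cap U_j\ne\emptyset$, is of the form $\multiplication(\psi_j\cdot\chi_\hbar)\circ\prequantumL_{\kappa_{j,i}}$; each such entry falls under the hypotheses of Lemma~\ref{lm:L_g_almost_identity-Tr} (applied to the family $\mathscr{G}_\hbar=\{A_{j,i}\circ\kappa_{j,i}\}$ introduced in Section~\ref{ss:local_charts}, after absorbing the affine isometry $A_{j,i}$, which is unitary on $\mathcal{H}_\hbar^r(\real^{2d})$ and commutes with $t_\hbar^{(k)}$). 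That lemma yields
\[
\left\|[\,t_\hbar^{(k)},\;\multiplication(\psi_j\chi_\hbar)\circ\prequantumL_{\kappa_{j,i}}]\right\|_{tr}\le C\hbar^{-\theta d+\epsilon}
\]
for each nonzero entry. Since there are at most $C_0\cdot I_\hbar$ nonzero entries, the trace norm of the matrix commutator is bounded by $C I_\hbar\cdot\hbar^{-\theta d+\epsilon}\le C\hbar^{-d+\theta d+\epsilon}$, and composition with the uniformly bounded operators $\mathbf{I}_\hbar^*\mathbf{t}_\hbar^{(k)}$ and $\mathbf{I}_\hbar$ preserves this bound. Taking any $\epsilon'\in(0,\theta d+\epsilon)$ gives the desired estimate $C\hbar^{-d+\epsilon'}$.

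The main obstacle I expect is bookkeeping: one must be careful that the trace-class bound on an off-diagonal matrix of operators really is controlled by the sum over its entries (this is standard, using $\|A\|_{tr}\le\sum_{ij}\|A_{ij}\|_{tr}$) and that the rearrangement used to convert the difference $(\check{\tau}^{(k)})^2-\check{\tau}^{(k)}$ into a commutator produces only lower-order remainders. These are essentially algebraic manipulations combined with repeated application of the quasi-commutation estimates already proven in Proposition~\ref{pp1} and Lemma~\ref{lm:L_g_almost_identity-Tr}, so the argument is technically elementary once the setup is in place.
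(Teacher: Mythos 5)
Your proof follows essentially the same strategy as the paper: both reduce to the chart-local trace bounds (Lemma~\ref{lm:trace_basic}/Corollary~\ref{cor:XT_pm}) for the first claim and to a trace-norm estimate on the commutator $[\mathbf{I}_\hbar\mathbf{I}_\hbar^*,\mathbf{t}_\hbar^{(k)}]$ (via Corollary~\ref{cor:XT_exchange-Tr} and Lemma~\ref{lm:L_g_almost_identity-Tr}) for the second, then sum over charts using $I_\hbar\le C\hbar^{-d(1-2\theta)}$ and the bounded intersection multiplicity. One small slip: since $\mathbf{I}_\hbar^*$ multiplies by $\chi_\hbar$ (not $\psi_i$), the diagonal representation should read $\sum_i R_i^*\circ\multiplication(\chi_\hbar)\circ t_\hbar^{(k)}\circ\multiplication(\psi_i)\circ R_i$, but this does not affect the estimates.
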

\begin{proof}
It is enough to prove the corresponding statement for 
\[
\boldI_{\hbar}\circ\check{\tau_{\hbar}^{(k)}}\circ\boldI_{\hbar}^{*}=(\boldI_{\hbar}\circ\boldI_{\hbar}^{*})\circ\boldt_{\hbar}^{(k)}\circ(\boldI_{\hbar}\circ\boldI_{\hbar}^{*}):\bigoplus_{i=1}^{I_{\hbar}}\mathcal{H}_{\hbar}^{r}(\real^{2d})\to\bigoplus_{i=1}^{I_{\hbar}}\mathcal{H}_{\hbar}^{r}(\real^{2d}).
\]
Applying Corollary \ref{cor:XT_pm} and Proposition \ref{pp:bdd_g}
(to deal with the non-linearity of the coordinate change transformations),
we see that each component of this operator is a trace class operator
and hence so is itself. Recalling Proposition \ref{prop:Local-charts}(3),
we obtain the first claim by summing up the trace norm of the components.
To prove the second claim, we compare the operator above with 
\[
(\boldI_{\hbar}\circ\check{\tau_{\hbar}^{(k)}}\circ\boldI_{\hbar}^{*})\circ(\boldI_{\hbar}\circ\check{\tau_{\hbar}^{(k)}}\circ\boldI_{\hbar}^{*})=(\boldI_{\hbar}\circ\boldI_{\hbar}^{*})\circ\boldt_{\hbar}^{(k)}\circ(\boldI_{\hbar}\circ\boldI_{\hbar}^{*})\circ\boldt_{\hbar}^{(k)}\circ(\boldI_{\hbar}\circ\boldI_{\hbar}^{*}):\bigoplus_{i=1}^{I_{\hbar}}\mathcal{H}_{\hbar}^{r}(\real^{2d})\to\bigoplus_{i=1}^{I_{\hbar}}\mathcal{H}_{\hbar}^{r}(\real^{2d}).
\]
We need to prove that the trace norm of the commutator
\[
\left[(\boldI_{\hbar}\circ\boldI_{\hbar}^{*}),\boldt_{\hbar}^{(k)}\right]:\bigoplus_{i=1}^{I_{\hbar}}\mathcal{H}_{\hbar}^{r}(\real^{2d})\to\bigoplus_{i=1}^{I_{\hbar}}\mathcal{H}_{\hbar}^{r}(\real^{2d})
\]
is bounded by $C\hbar^{-d+\epsilon}$. It is easy to obtain such estimate
by using Corollary \ref{cor:XT_exchange-Tr} and Lemma \ref{lm:L_g_almost_identity-Tr}
to exchange the order of operators and also using Proposition \ref{prop:Local-charts}(3).
\end{proof}
Now we will modify the operators $\check{\tau}_{\hbar}^{(k)}$, $0\le k\le n+1$,
to get the projection operators $\tau_{\hbar}^{(k)}$, $0\le k\le n$,
and $\tilde{\tau}_{\hbar}=\tau_{\hbar}^{(n+1)}$ in the statement
of the theorem. The estimate (\ref{eq:check_tau_almost_projection})
implies that the spectral set of the operator $\checktau_{\hbar}^{(k)}\circ\checktau_{\hbar}^{(k)}-\checktau_{\hbar}^{(k)}$
is contained in the disk $|z|\le C\hbar^{\epsilon}$. By the spectral
mapping theorem \cite[Part I, VII.3.11]{DunfordSchwartz1}, the spectral
set of the operators $\checktau_{\hbar}^{(k)}$ is contained in the
union of two small disks around $0$ and $1$: 
\begin{equation}
\mathbb{D}(0,C\hbar^{\epsilon})\cup\mathbb{D}(1,C\hbar^{\epsilon})\qquad\qquad\mbox{where }\mathbb{D}(z,r):=\{w\in\complex\mid|w-x|<r\}.\label{eq:twodisks}
\end{equation}
For $0\le k\le n+1$, let $\hattau_{\hbar}^{(k)}$ be the spectral
projector of $\checktau_{\hbar}^{(k)}$ for the part of its spectral
set contained in $\mathbb{D}(1,C\hbar^{\epsilon})$. This is of finite
rank because of compactness of $\checktau_{\hbar}^{(k)}$. The next
lemma should be easy to prove. (We provide a proof in the appendix
for completeness.) 
\begin{lem}
\label{lm:pi_P} There is a constant $C>0$ independent of $\hbar$
such that 
\[
\|\checktau_{\hbar}^{(k)}-\hattau_{\hbar}^{(k)}\|_{\mathcal{H}_{\hbar}^{r}(P)}\le C\hbar^{\epsilon}\quad\mbox{and}\quad\|\checktau_{\hbar}^{(k)}-\hattau_{\hbar}^{(k)}:\mathcal{H}_{\hbar}^{r}(P)\to\mathcal{H}_{\hbar}^{r}(P)\|_{tr}\le C\hbar^{-d+\epsilon}
\]
 for some $C>0$ independent of $\hbar$.
\end{lem}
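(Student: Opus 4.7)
\medskip

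\noindent The plan is to deduce both estimates from a single contour-integral identity expressing $\check\tau_\hbar^{(k)}-\hat\tau_\hbar^{(k)}$ as a Cauchy integral whose integrand factors through the ``almost-projection'' defect
\[
S:=\check\tau_\hbar^{(k)}-\check\tau_\hbar^{(k)}\circ\check\tau_\hbar^{(k)},
\]
for which Lemma~\ref{lem:trace_norm_tau} already gives $\|S\|_{\mathcal{H}_\hbar^r(P)}\le C\hbar^\epsilon$ and $\|S\|_{tr}\le C\hbar^{-d+\epsilon}$.

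\medskip

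\noindent The key algebraic observation, valid for any bounded operator $A=\check\tau_\hbar^{(k)}$, is the identity
\[
(z-A)\bigl(z-(I-A)\bigr)=z(z-1)\cdot I+S,
\]
which is checked by direct expansion. Taking $\gamma_1$ to be the circle of radius $1/3$ centered at $1\in\complex$, one has $|z(z-1)|\ge 2/9$ on $\gamma_1$, so for $\hbar$ small enough the Neumann series gives $\bigl(z(z-1)I+S\bigr)^{-1}$ with operator norm $\le C$; consequently the resolvent $(z-A)^{-1}=(z-(I-A))\bigl(z(z-1)I+S\bigr)^{-1}$ exists and satisfies $\|(z-A)^{-1}\|\le C$ uniformly for $z\in\gamma_1$. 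This justifies using the Riesz projector formula $\hat\tau_\hbar^{(k)}=\frac{1}{2\pi i}\oint_{\gamma_1}(z-A)^{-1}\,dz$, in view of the spectral localization (\ref{eq:twodisks}).

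\medskip

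\noindent From the factorization above we get the resolvent identity
\[
(z-A)^{-1}=\frac{z-(I-A)}{z(z-1)}-\frac{(z-A)^{-1}S}{z(z-1)}.
\]
Integrating over $\gamma_1$: the first term is a meromorphic function with simple poles at $0$ and $1$, and the residue at $z=1$ is simply $A$ (so its contour integral equals $A=\check\tau_\hbar^{(k)}$); the second term contributes
\[
-\frac{1}{2\pi i}\oint_{\gamma_1}\frac{(z-A)^{-1}S}{z(z-1)}\,dz.
\]
This yields the crucial identity
\[
\check\tau_\hbar^{(k)}-\hat\tau_\hbar^{(k)}=\frac{1}{2\pi i}\oint_{\gamma_1}\frac{(z-A)^{-1}S}{z(z-1)}\,dz.
\]

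\medskip

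\noindent Finally, applying $\|\cdot\|$ and $\|\cdot\|_{tr}$ respectively under the integral, using $\|XS\|\le\|X\|\cdot\|S\|$ and $\|XS\|_{tr}\le\|X\|\cdot\|S\|_{tr}$, and the uniform bounds $\|(z-A)^{-1}\|\le C$, $|z(z-1)|^{-1}\le 9/2$, $\mathrm{length}(\gamma_1)=2\pi/3$, we obtain
\[
\|\check\tau_\hbar^{(k)}-\hat\tau_\hbar^{(k)}\|\le C'\|S\|\le C''\hbar^\epsilon\quad\text{and}\quad\|\check\tau_\hbar^{(k)}-\hat\tau_\hbar^{(k)}\|_{tr}\le C'\|S\|_{tr}\le C''\hbar^{-d+\epsilon}.
\]
The only point requiring care is the uniform resolvent bound on $\gamma_1$, which is the purpose of the quadratic factorization; everything else reduces to straightforward estimation of a bounded-length contour integral. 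There is no serious obstacle beyond this algebraic step.
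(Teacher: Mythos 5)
Your proof is correct, and it takes a genuinely different route from the paper's. The paper works with a \emph{shrinking} contour of radius $r_0=C_2\hbar^\epsilon$ around each of $0$ and $1$; to make this work it first proves a quantitative resolvent estimate, $\|(\rho-\check\tau_\hbar^{(k)})^{-1}\|\le C_1/\min\{|\rho|,|1-\rho|\}$ for $\min\{|\rho|,|1-\rho|\}\ge C_2\hbar^\epsilon$, obtained by comparing $\rho^2 u-\check\tau\circ\check\tau u$ with $\rho u-\check\tau u$. The operator-norm bound then comes from the $O(r_0)$ length of the contour and the extra factor $|\rho|$ or $|\rho-1|\le r_0$ in the integrand; the trace-norm bound requires a further manipulation with $\check\tau\circ\check\tau-\hat\tau$. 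You instead fix the contour $\gamma_1$ once and for all, and the whole lemma collapses to the single identity $\check\tau_\hbar^{(k)}-\hat\tau_\hbar^{(k)}=\frac{1}{2\pi i}\oint_{\gamma_1}\frac{(z-A)^{-1}S}{z(z-1)}\,dz$ with $S=A-A^2$, the uniform resolvent bound on $\gamma_1$ being an immediate Neumann-series consequence of your factorization $(z-A)\bigl(z-(I-A)\bigr)=z(z-1)I+S$ and of $\|S\|\le C\hbar^\epsilon$. This is cleaner: because $S$ appears \emph{multiplicatively} in the integrand, the operator-norm and trace-norm estimates follow from the very same formula by applying the two corresponding estimates of Lemma~\ref{lem:trace_norm_tau}, with no separate manipulation for the trace norm. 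The paper's shrinking-contour resolvent estimate carries more information (it tracks the rate of resolvent blowup near $0$ and $1$), but that extra precision is not needed for this lemma; your argument buys a shorter and more unified proof at the cost of that extra information. Two small points you implicitly use and could spell out: that the spectral localization (\ref{eq:twodisks}) with $C\hbar^\epsilon<1/3$ is exactly what makes $\gamma_1$ a valid Riesz contour for the projector onto the cluster near $1$, and that $\|A\|_{\mathcal{H}_\hbar^r(P)}$ is bounded uniformly in $\hbar$ (needed to bound $\|z-(I-A)\|$ on $\gamma_1$), which the paper records just before (\ref{eq:sum_of_check_tau}).
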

Thus we get the set of projection operators $\hattau_{\hbar}^{(k)}$
for $0\le k\le n+1$, which approximate $\check{\tau}_{\hbar}^{(k)}$.
As consequences of (\ref{eq:sum_of_check_tau}) and (\ref{eq:cTcT2}),
we have
\begin{equation}
\|\mathrm{Id}-(\hattau_{\hbar}^{(0)}+\hattau_{\hbar}^{(1)}+\cdots+\hattau_{\hbar}^{(n)}+\hattau_{\hbar}^{(n+1)})\|_{\mathcal{H}_{\hbar}^{r}(P)}\le C\hbar^{\epsilon}\label{eq:sum_of_hat_tau}
\end{equation}
 and 
\begin{equation}
\|\hattau_{\hbar}^{(k)}\circ\hattau_{\hbar}^{(k')}\|_{\mathcal{H}_{\hbar}^{r}(P)}\le C\hbar^{\epsilon}\quad\mbox{ if \ensuremath{k\neq k'}}.\label{eq:orthogonality_between_hat_tau}
\end{equation}
We set $\mathcal{H}_{k}:=\mathrm{Im\:}\hat{\tau}_{\hbar}^{(k)}$ for
$0\le k\le n+1$ and put $\widetilde{\mathcal{H}}=\mathcal{H}_{n+1}$.
We have 
\begin{lem}
The Hilbert space $\mathcal{H}_{\hbar}^{r}(P)$ is decomposed into
the direct sum: 
\[
\mathcal{H}_{\hbar}^{r}(P)=\mathcal{H}_{0}\oplus\mathcal{H}_{1}\oplus\mathcal{H}_{2}\oplus\cdots\oplus\mathcal{H}_{n}\oplus\widetilde{\mathcal{H}}.
\]
\end{lem}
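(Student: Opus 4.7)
The plan is to produce the decomposition by constructing explicit inverse maps between $\mathcal{H}_\hbar^r(P)$ and the abstract direct sum $\mathcal{H}_0\oplus\cdots\oplus\mathcal{H}_{n+1}$. Define the summation map
\[
\Phi:\mathcal{H}_0\oplus\mathcal{H}_1\oplus\cdots\oplus\mathcal{H}_{n+1}\to\mathcal{H}_\hbar^r(P),\qquad \Phi(v_0,\ldots,v_{n+1})=\sum_{k=0}^{n+1}v_k,
\]
and the collection map
\[
\Psi:\mathcal{H}_\hbar^r(P)\to\mathcal{H}_0\oplus\cdots\oplus\mathcal{H}_{n+1},\qquad \Psi(u)=(\hat{\tau}_\hbar^{(0)}u,\ldots,\hat{\tau}_\hbar^{(n+1)}u),
\]
where we equip the direct sum with the Hilbert norm $\|(v_0,\ldots,v_{n+1})\|^2=\sum_k\|v_k\|^2_{\mathcal{H}_\hbar^r(P)}$. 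The goal is then simply to show that both $\Phi\circ\Psi$ and $\Psi\circ\Phi$ are invertible for all sufficiently small $\hbar$; this forces $\Phi$ to be bijective, which is the required statement.

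For the composition $\Phi\circ\Psi$, observe that by definition $\Phi\circ\Psi=\sum_{k=0}^{n+1}\hat{\tau}_\hbar^{(k)}$, so estimate (\ref{eq:sum_of_hat_tau}) gives $\|\Phi\circ\Psi-\mathrm{Id}\|_{\mathcal{H}_\hbar^r(P)}\le C\hbar^{\epsilon}$. Hence $\Phi\circ\Psi$ is invertible by a Neumann series once $\hbar$ is small enough. For $\Psi\circ\Phi$, using that each $\hat{\tau}_\hbar^{(k)}$ is a genuine projection ($(\hat{\tau}_\hbar^{(k)})^2=\hat{\tau}_\hbar^{(k)}$) so that $v_k=\hat{\tau}_\hbar^{(k)}v_k$ for $v_k\in\mathcal{H}_k$, the $j$-th component of $(\Psi\circ\Phi)(v_0,\ldots,v_{n+1})-(v_0,\ldots,v_{n+1})$ equals $\sum_{k\ne j}\hat{\tau}_\hbar^{(j)}\hat{\tau}_\hbar^{(k)}v_k$, and by (\ref{eq:orthogonality_between_hat_tau}) each summand is bounded by $C\hbar^{\epsilon}\|v_k\|$. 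A Cauchy--Schwarz (or $\ell^2$) estimate over the finite index set $\{0,\ldots,n+1\}$ then yields $\|\Psi\circ\Phi-\mathrm{Id}\|\le C'\hbar^{\epsilon}$, so $\Psi\circ\Phi$ is also invertible by a Neumann series.

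From the invertibility of $\Phi\circ\Psi$ we get that $\Phi$ is surjective; from the invertibility of $\Psi\circ\Phi$ we get that $\Phi$ is injective. Combining these, $\Phi$ is a topological isomorphism for small $\hbar$, which is exactly the statement that $\mathcal{H}_\hbar^r(P)=\mathcal{H}_0\oplus\cdots\oplus\mathcal{H}_n\oplus\widetilde{\mathcal{H}}$ as a direct sum decomposition.

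There is no genuinely hard step here: the whole point is that the preparatory work (the spectral truncation turning the $\check{\tau}_\hbar^{(k)}$ into actual projections via Lemma \ref{lm:pi_P}, and the almost-orthogonality estimates (\ref{eq:sum_of_hat_tau})--(\ref{eq:orthogonality_between_hat_tau})) has reduced the claim to an abstract perturbation argument. The only minor point to watch is the choice of norm on the direct sum: one must use the $\ell^2$ sum of the $\mathcal{H}_\hbar^r(P)$ norms (rather than, say, the $\ell^\infty$ sum) so that the estimate $\|\Psi\circ\Phi-\mathrm{Id}\|\le C'\hbar^{\epsilon}$ comes out with a constant $C'$ independent of $\hbar$; this is automatic because $n$ is fixed once and for all in Theorem \ref{thm:More-detailled-description_of_spectrum}.
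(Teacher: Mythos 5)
Your argument is correct and is essentially the same as the paper's. The paper constructs the decomposition explicitly by setting $\tau_\hbar^{(k)}:=\hat\tau_\hbar^{(k)}\bigl(\sum_j\hat\tau_\hbar^{(j)}\bigr)^{-1}$ — invertibility of the sum coming from (\ref{eq:sum_of_hat_tau}), uniqueness from (\ref{eq:orthogonality_between_hat_tau}) — and your pair $\Phi,\Psi$ simply packages the same two estimates into the invertibility of $\Phi\circ\Psi$ and $\Psi\circ\Phi$; in particular your $\Phi^{-1}=\Psi\circ(\Phi\circ\Psi)^{-1}$ recovers exactly the paper's $\tau_\hbar^{(k)}$.
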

\begin{proof}
Since the sum $\hattau_{\hbar}^{(0)}+\hattau_{\hbar}^{(1)}+\cdots+\hattau_{\hbar}^{(n)}+\hattau_{\hbar}^{(n+1)}$
is invertible from (\ref{eq:sum_of_hat_tau}), we can set 
\begin{equation}
\tau_{\hbar}^{(k)}:=\hattau_{\hbar}^{(k)}(\hattau_{\hbar}^{(0)}+\hattau_{\hbar}^{(1)}+\cdots+\hattau_{\hbar}^{(n)}+\hattau_{\hbar}^{(n+1)})^{-1}\quad\mbox{for }0\le k\le n+1.\label{eq:label}
\end{equation}
We can express any $v\in\mathcal{H}_{\hbar}^{r}(P)$ as 
\[
v=\sum_{k=1}^{n+1}v_{k}\quad\mbox{with }v_{k}:=\tau_{\hbar}^{(k)}(v)\in\mathcal{H}_{k}.
\]
Thus the subspaces $\mathcal{H}_{k}$ for $0\le k\le n+1$ span the
whole space $\mathcal{H}_{\hbar}^{r}(P)$. Uniqueness of such expression
follows from (\ref{eq:orthogonality_between_hat_tau}).
\end{proof}
From the argument in the proof above, the operator $\tau_{\hbar}^{(k)}:\mathcal{H}_{\hbar}^{r}(P)\to\mathcal{H}_{k}(P)$
for $0\le k\le n+1$ in (\ref{eq:label}) are the projections to the
subspace $\mathcal{H}_{k}$ along other subspaces. Clearly we have
\begin{equation}
\|\tau_{\hbar}^{(k)}-\hattau_{\hbar}^{(k)}\|_{\mathcal{H}_{\hbar}^{r}(P)}\le C\hbar^{\epsilon}\quad\mbox{and hence}\quad\|\tau_{\hbar}^{(k)}-\checktau_{\hbar}^{(k)}\|_{\mathcal{H}_{\hbar}^{r}(P)}\le C\hbar^{\epsilon}.\label{eq:pi_cT1}
\end{equation}
and also 
\[
\|\tau_{\hbar}^{(k)}-\hattau_{\hbar}^{(k)}:\mathcal{H}_{\hbar}^{r}(P)\to\mathcal{H}_{\hbar}^{r}(P)\|_{tr}\le C\hbar^{-d+\epsilon}\quad\mbox{and hence}\quad\|\tau_{\hbar}^{(k)}-\checktau_{\hbar}^{(k)}:\mathcal{H}_{\hbar}^{r}(P)\to\mathcal{H}_{\hbar}^{r}(P)\|_{tr}\le C\hbar^{-d+\epsilon}.
\]

\subsubsection{Proof of Claim (1)}

For the external band $k=0$, Claim 1 is written in Corollary \ref{cor:Weyl_law}.
But the same argument works for every $0\leq k\leq n$, so we give
it here. Since we have
\[
\mathrm{rank}\tau_{\hbar}^{(k)}=\mathrm{Tr\:\tau_{\hbar}^{(k)}}=\mbox{\ensuremath{\mathrm{Tr}}}\:\check{\tau}_{\hbar}^{(k)}+\mathcal{O}(\hbar^{-d+\epsilon}),
\]
it is enough to prove the claim that 
\[
\left|\mathrm{Tr}\:\check{\tau}_{\hbar}^{(k)}-\frac{r(k,d)\cdot Vol_{\omega}(M)}{(2\pi\hbar)^{d}}\right|<C\hbar^{-d+\epsilon}.
\]
From the definition, we have that 
\[
\mathrm{Tr}\,\check{\tau}_{\hbar}^{(k)}=\mathrm{Tr}(\boldI_{\hbar}\circ\boldt_{\hbar}^{(k)}\circ\boldI_{\hbar}^{*}):=\sum_{i=1}^{I_{\hbar}}\mathrm{Tr}(\boldI_{\hbar}\circ\boldt_{\hbar}^{(k)}\circ\boldI_{\hbar}^{*})_{i,i}.
\]
From Lemma \ref{lm:trace_basic} and Corollary \ref{cor:XT_exchange-Tr},
we see that 
\[
\left|\mathrm{Tr}(\boldI_{\hbar}\circ\boldt_{\hbar}^{(k)}\circ\boldI_{\hbar}^{*})_{i,i}-\frac{r(k,d)}{(2\pi\hbar)^{d}}\int\psi_{i}dx\right|\le C\hbar^{-\theta d+\epsilon}
\]
Since $I_{\hbar}<C\hbar^{-(1-\theta)d}$ and $\sum_{i}\psi_{i}\circ\kappa_{i}^{-1}\equiv1$,
we obtain 
\[
\left|\mathrm{Tr}\,\check{\tau}_{\hbar}^{(k)}-\frac{r(k,d)\cdot Vol_{\omega}(M)}{(2\pi\hbar)^{d}}\right|<C\hbar^{-(1-\theta)d}\cdot\hbar^{-\theta d+\epsilon}=C\hbar^{-d+\epsilon}.
\]

\subsubsection{Proof of Claim (2)}

For the proof of Claim (2)-(5), it is enough to prove the statements
with $\tau_{\hbar}^{(k)}$ replaced by $\check{\tau}_{\hbar}^{(k)}$
because we have (\ref{eq:pi_cT1}). The operator norm of $\checktau_{\hbar}^{(k)}:\mathcal{H}_{\hbar}^{r}(P)\to\mathcal{H}_{\hbar}^{r}(P)$
is bounded by a constant independent of $\hbar$ from Proposition~\ref{pp1},
as we noted in the beginning of this subsection.

\subsubsection{Proof of Claim (3) and (4)}

From the definition of the operators $\mathbf{F}_{\hbar}$ and $\check{\tau}_{\hbar}^{(k)}$,
the following diagram commutes: 
\[
\begin{CD}\bigoplus_{i=1}^{I_{\hbar}}\mathcal{H}_{\hbar}^{r}(\real^{2d})@>\boldI_{\hbar}\circ\boldI_{\hbar}^{*}\circ{\boldt}_{\hbar}^{(k)}\circ{\boldF}_{\hbar}\circ{\boldt}_{\hbar}^{(k')}>>\bigoplus_{i=1}^{I_{\hbar}}\mathcal{H}_{\hbar}^{r}(\real^{2d})\\
@AA\boldI_{\hbar}A@AA\boldI_{\hbar}A\\
\mathcal{H}_{\hbar}^{r}(P)@>\checktau_{\hbar}^{(k)}\circ\hat{F}_{\hbar}\circ\checktau_{\hbar}^{(k')}>>\mathcal{H}_{\hbar}^{r}(P)
\end{CD}
\]
Since the operator $\boldI_{\hbar}$ in the vertical direction is
an isometric embedding, we have 
\begin{align*}
\|\checktau_{\hbar}^{(k)}\circ & \hat{F}_{\hbar}\circ\checktau_{\hbar}^{(k')}\|_{\mathcal{H}_{\hbar}^{r}(P)}\le\|\boldI_{\hbar}\circ\boldI_{\hbar}^{*}\circ{\boldt}_{\hbar}^{(k)}\circ{\boldF}_{\hbar}\circ{\boldt}_{\hbar}^{(k')}\|_{\mathcal{H}_{\hbar}^{r}}=\|(\boldI_{\hbar}\circ\boldI_{\hbar}^{*}\circ{\boldt}_{\hbar}^{(k)})\circ({\boldt}_{\hbar}^{(k)}\circ{\boldF}_{\hbar}\circ{\boldt}_{\hbar}^{(k')})\|_{\mathcal{H}_{\hbar}^{r}}.
\end{align*}
From Proposition \ref{pp1}, we have 
\[
\left\Vert \boldI_{\hbar}\circ\boldI_{\hbar}^{*}\circ{\boldt}_{\hbar}^{(k)}\right\Vert _{\mathcal{H}_{\hbar}^{r}}\le C_{0}\quad\mbox{ for \ensuremath{0\le k\le n}}
\]
where $C_{0}$ is a constant independent of $\hbar$, $f$ and $V$.
These estimates give also 
\[
\left\Vert \boldI_{\hbar}\circ\boldI_{\hbar}^{*}\circ{\boldt}_{\hbar}^{(n+1)}\right\Vert _{\bigoplus_{i=1}^{I_{\hbar}}\mathcal{H}_{\hbar}^{r,+}(\real^{2d})\to\bigoplus_{i=1}^{I_{\hbar}}\mathcal{H}_{\hbar}^{r}(\real^{2d})}\le C_{0},
\]
because of the relation $\boldt_{\hbar}^{(n+1)}=\tilde{\boldt}_{\hbar}=\mathrm{Id}-\sum_{k=0}^{n}\boldt_{\hbar}^{(k)}$
and Lemma \ref{lm:IIbdd}. Now Claim (4) is an immediate consequence
of Proposition~\ref{prop:key_proposition2} (1). From Proposition
\ref{prop:key_proposition2} and Lemma \ref{lm:bounded_F}, we have
\[
\|{\boldt}_{\hbar}^{(k)}\circ{\boldF}_{\hbar}\circ{\boldt}_{\hbar}^{(k')}\|_{\mathcal{H}_{\hbar}^{r}}\le\|{\boldt}_{\hbar}^{(k)}\circ{\boldt}_{\hbar}^{(k')}\circ{\boldF}_{\hbar}\|_{\mathcal{H}_{\hbar}^{r}}+C\hbar^{\epsilon}=C\hbar^{\epsilon}
\]
 if $k\neq k'$ and hence Claim (3) follows.

\subsubsection{Proof of Claim (5)}

Take $0\neq u\in\mathcal{H}_{k}=\Im\tau_{\hbar}^{(k)}$ for $0\le k\le n$
arbitrarily and set $\boldu=\boldI_{\hbar}(u)$. Then we have, from
(\ref{eq:pi_cT1}), that 
\begin{align*}
\|\boldu-(\boldI_{\hbar}\circ\boldI_{\hbar}^{*})\circ\boldt_{\hbar}^{(k)}(\boldu)\|_{\mathcal{H}_{\hbar}^{r}} & =\|\boldI_{\hbar}(u)-\boldI_{\hbar}\circ\checktau_{\hbar}^{(k)}(u)\|_{\mathcal{H}_{\hbar}^{r}}=\|u-\checktau_{\hbar}^{(k)}(u)\|_{\mathcal{H}_{\hbar}^{r}(P)}\\
 & =\|(\tau_{\hbar}^{(k)}-\checktau_{\hbar}^{(k)})u\|_{\mathcal{H}_{\hbar}^{r}(P)}\le C\hbar^{\epsilon}\|u\|_{\mathcal{H}_{\hbar}^{r}(P)}=C\hbar^{\epsilon}\|\boldu\|_{\mathcal{H}_{\hbar}^{r}}
\end{align*}
Hence we can apply the second claim in Proposition \ref{prop:key_proposition2}
to $\boldu$ and obtain Claim (5), noting that $\|\boldF_{\hbar}\boldu\|_{\mathcal{H}_{\hbar}^{r}}=\|F_{\hbar}u\|_{\mathcal{H}_{\hbar}^{r}(P)}$
by definition.

\subsection{\label{sub:Proof-of-Theorem1.40}Proof of Theorem \ref{prop:express_Op_Psi}}

We will derive Theorem \ref{prop:express_Op_Psi} as a consequence
of Lemma \ref{lm:trace_basic} on local charts, gluing together the
expression given in (\ref{eq:trace_formula_local}). Let $\varphi\in S_{\delta}$
be a symbol, i.e. a $\hbar$-family of smooth functions $\left(\varphi_{\hbar}\right)_{\hbar}$
on $M$ with regularity estimates as in Definition \ref{def:class_symbols}.
Below we write $\varphi$ for $\varphi_{\hbar}$. 

Recall the definition of the operators $\boldI_{\hbar}$ and $\boldI_{\hbar}^{*}$
in (\ref{eq:local_data_ui}) and (\ref{eq:expression_of_I*}). We
define 
\[
\boldI_{i,\hbar}:C_{N}^{\infty}(P)\to C_{0}^{\infty}(\mathbb{D}(\hbar^{1/2-\theta})),\quad u\mapsto u_{i}(x)=\psi_{i}(x)\cdot u(\tau_{i}(\kappa_{i}(x)))
\]
and 
\[
\boldI_{i,\hbar}^{*}:C_{0}^{\infty}(\mathbb{D}(\hbar^{1/2-\theta}))\to C_{N}^{\infty}(P),\quad u\mapsto e^{N\alpha_{i}(p)}\cdot\chi_{\hbar}\cdot u
\]
for $1\le i\le I_{\hbar}$. Then we have
\begin{eqnarray}
\mathcal{M}(\varphi)\circ\checktau_{\hbar}^{(0)} & \underset{(\ref{eq:def_Tau_chech})}{=} & \mathcal{M}(\varphi)\circ\boldI_{\hbar}^{*}\circ\boldt_{\hbar}^{(0)}\circ\boldI_{\hbar}\nonumber \\
 & \underset{(\ref{eq:local_data_ui}),(\ref{eq:expression_of_I*})}{=} & \mathcal{M}(\varphi)\circ\sum_{i=1}^{I_{\hbar}}\boldI_{i,\hbar}^{*}\circ t_{\hbar}^{(0)}\circ\mathcal{M}(\psi_{i})\circ\boldI_{i,\hbar}\label{eq:M_tau_c}\\
 & = & \sum_{i=1}^{I_{\hbar}}\boldI_{i,\hbar}^{*}\circ\mathcal{M}(\varphi\circ\kappa_{i})\circ t_{\hbar}^{(0)}\circ\mathcal{M}(\psi_{i})\circ\boldI_{i,\hbar}
\end{eqnarray}
In the last line above, we have used the fact that $\mathcal{M}(\varphi)\circ\boldI_{i,\hbar}^{*}=\boldI_{i,\hbar}^{*}\circ\mathcal{M}(\varphi\circ\kappa_{i})$.
From Definition \ref{def:class_symbols}, the family of functions
$\scrX_{\hbar}:=\{\chi_{\hbar}\cdot\varphi\circ\kappa_{i};1\le i\le I_{\hbar}\}$
satisfies the conditions in Setting I in Subsection \ref{Setting-I}.
Let $\hat{\pi}_{0}(\nu)$ be the rank one projection operator $\hat{\pi}_{\alpha}(\nu)$
defined in (\ref{eq:def_projector_pi_alpha}) for $\alpha=0\in\mathbb{N}^{0}$.
We continue (\ref{eq:M_tau_c}) and get (we will justify the second
line below)
\begin{eqnarray}
\mathcal{M}\left(\varphi\right)\circ\checktau_{\hbar}^{(0)} & \underset{}{=} & \sum_{i=1}^{I_{\hbar}}\mathbf{I}_{i,\hbar}^{*}\circ\left(\mathcal{M}(\varphi\circ\kappa_{i})\circ t_{\hbar}^{(0)}\circ\mathcal{M}(\psi_{i})\right)\circ\mathbf{I}_{i,\hbar}\nonumber \\
 & \underset{(\ref{eq:Norm_estimate})}{=} & \sum_{i=1}^{I_{\hbar}}\mathbf{I}_{i,\hbar}^{*}\circ\left(\int_{\mathbb{R}^{2d}}\varphi\circ\kappa_{i}\left(\nu\right)\hat{\pi}_{0}\left(\nu\right)\frac{d\nu}{\left(2\pi\hbar\right)^{d}}\right)\circ\mathcal{M}(\psi_{i})\circ\mathbf{I}_{i,\hbar}^{*}+O\left(\hbar^{\theta}\right)\nonumber \\
 & = & \frac{1}{\left(2\pi\hbar\right)^{d}}\int_{M}\sum_{i=1}^{I_{\hbar}}\varphi\left(x\right)\cdot\chi_{\hbar}(\kappa_{i}^{-1}(x))\cdot\hat{\pi}_{i}^{\left(1\right)}\left(x\right)dx+O\left(\hbar^{\theta}\right)\label{eq:Mphi_tau_check_suite}
\end{eqnarray}
where we have put
\begin{equation}
\hat{\pi}_{i}^{\left(1\right)}\left(x\right):=\mathbf{I}_{i,\hbar}^{*}\circ\hat{\pi}_{0}\left(\kappa_{i}^{-1}\left(x\right)\right)\circ\mathcal{M}(\psi_{i})\circ\mathbf{I}_{i,\hbar},\qquad i\in\left\{ 1,\ldots I_{\hbar}\right\} ,x\in M.\label{eq:def_pi_1}
\end{equation}
and $O\left(\hbar^{\theta}\right)$ denotes the error term whose operator
norm is bounded by $C\hbar^{\epsilon}$ with $C$ a constant independent
of $\hbar$. In order to justify this small error term, for every
$i\in\left\{ 1,\ldots I_{\hbar}\right\} $, let 
\[
T_{i}:=\mathbf{I}_{i,\hbar}^{*}\circ\left(\mathcal{M}(\varphi)\circ t_{\hbar}^{(0)}-\left(\int_{\mathbb{R}^{2d}}\varphi_{i}\left(\nu\right)\hat{\pi}_{0}\left(\nu\right)\frac{d\nu}{\left(2\pi\hbar\right)^{d}}\right)\right)\circ\mathcal{M}(\psi_{i})\circ\mathbf{I}_{i,\hbar}.
\]
For every $i\in\left\{ 1,\ldots I_{\hbar}\right\} $, we have from
(\ref{eq:Norm_estimate}) that $\left\Vert T_{i}\right\Vert =O\left(\hbar^{\theta}\right)$.
Due to truncations functions, for every $0\le i\le I_{\hbar}$, we
have that $T_{i}\circ T_{k}^{*}=0$ and $T_{i}^{*}\circ T_{k}=0$
except for a finite number (bounded uniformly in $\hbar$) of $k$.
Hence from the discrete version of the Cotlar-Stein Lemma%
\footnote{\begin{lem}
\textbf{\textup{\label{lem:Discrete-version-of-Cotlar-Stein}``Discrete
version of the Cotlar-Stein Lemma''}}\textup{: If $\left(T_{j}\right)_{j}$
is a family of bounded operators, if $A:=\sup_{j}\sum_{k}\left\Vert T_{j}T_{k}^{*}\right\Vert ^{1/2}<\infty$
and $B:=\sup_{j}\sum_{k}\left\Vert T_{j}^{*}T_{k}\right\Vert ^{1/2}<\infty$
then $\sum_{j}T_{j}$ converges in the strong operator topology and
$\left\Vert \sum_{j}T_{j}\right\Vert \leq\sqrt{AB}$. }\end{lem}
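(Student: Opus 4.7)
The plan is to prove the Cotlar-Stein Lemma by the classical combinatorial argument. I would first establish the operator-norm bound $\left\Vert T_{F}\right\Vert \leq\sqrt{AB}$ for every finite partial sum $T_{F}:=\sum_{j\in F}T_{j}$ with a constant independent of the cardinality of $F$, and then deduce strong-operator convergence of the full sum from this uniform bound together with standard facts about series in Hilbert space. The key trick is the identity $\left\Vert T_{F}\right\Vert ^{2n}=\left\Vert (T_{F}^{*}T_{F})^{n}\right\Vert $, valid for every $n\geq 1$, which converts the problem into bounding a long product of operators in which the Cotlar-Stein hypotheses can be applied optimally.

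Concretely, I would expand
\begin{equation*}
(T_{F}^{*}T_{F})^{n}=\sum_{i_{1},j_{1},\ldots,i_{n},j_{n}\in F}T_{i_{1}}^{*}T_{j_{1}}T_{i_{2}}^{*}T_{j_{2}}\cdots T_{i_{n}}^{*}T_{j_{n}},
\end{equation*}
take the operator norm inside the sum, and bound each summand in two different ways by grouping the consecutive factors either as $(T_{i_{\ell}}^{*}T_{j_{\ell}})$ or as $T_{i_{1}}^{*}(T_{j_{\ell}}T_{i_{\ell+1}}^{*})T_{j_{n}}$. The geometric mean of the two resulting bounds, together with the elementary estimate $\left\Vert T_{j}\right\Vert \leq\min(A,B)\leq\sqrt{AB}$ (which holds because $\left\Vert T_{j}T_{j}^{*}\right\Vert ^{1/2}$ and $\left\Vert T_{j}^{*}T_{j}\right\Vert ^{1/2}$ each appear as individual summands in the sums defining $A$ and $B$), gives the pointwise estimate
\begin{equation*}
\left\Vert T_{i_{1}}^{*}T_{j_{1}}\cdots T_{i_{n}}^{*}T_{j_{n}}\right\Vert \leq\sqrt{AB}\cdot\prod_{\ell=1}^{n}\left\Vert T_{i_{\ell}}^{*}T_{j_{\ell}}\right\Vert ^{1/2}\cdot\prod_{\ell=1}^{n-1}\left\Vert T_{j_{\ell}}T_{i_{\ell+1}}^{*}\right\Vert ^{1/2}.
\end{equation*}

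The main step is then a telescoping summation over the $2n$ indices. There are only $2n-1$ pair-type factors but $2n$ indices, so one variable is ``free''. Summing one variable at a time, starting from $j_{n}$ and working backwards, each summation uses the unique remaining factor containing that variable and yields either $A$ or $B$; this alternation produces $n$ factors of $B$ and $n-1$ factors of $A$, after which the last remaining variable $i_{1}$ contributes only $|F|$. The total bound is $\left\Vert T_{F}\right\Vert ^{2n}\leq|F|\cdot A^{n-1/2}\cdot B^{n+1/2}$, so that $\left\Vert T_{F}\right\Vert \leq|F|^{1/(2n)}\cdot A^{1/2-1/(4n)}\cdot B^{1/2+1/(4n)}$. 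Letting $n\to\infty$ gives $\left\Vert T_{F}\right\Vert \leq\sqrt{AB}$, uniformly in $F$.

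For strong-operator convergence, I would apply the same argument to the subfamily $\{T_{j}^{*}T_{j}\}$ to deduce that $\sum_{j\in F}\left\Vert T_{j}v\right\Vert ^{2}=\langle\sum_{j\in F}T_{j}^{*}T_{j}v,v\rangle$ is uniformly bounded in $F$ for every vector $v$, hence convergent. Cauchyness of the partial sums $\sum_{j\in F}T_{j}v$ then follows by combining this with the uniform norm bound $\left\Vert T_{F'\setminus F}\right\Vert \leq\sqrt{AB}$ already established. The main obstacle will be the combinatorial bookkeeping in the telescoping step together with the careful choice of the geometric-mean pairing: a cruder estimate at that stage would produce only $\left\Vert T_{F}\right\Vert \leq AB$ or worse, which is insufficient for the conclusion.
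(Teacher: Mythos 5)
Your argument for the uniform bound $\|T_F\|\le\sqrt{AB}$ over finite sets $F$ is the classical Cotlar--Knapp--Stein argument and is correct in detail (the telescoping summation does indeed give $\|(T_F^*T_F)^n\|\le|F|\,A^{n-1/2}B^{n+1/2}$, whence the bound follows by letting $n\to\infty$); the paper does not give a proof of this footnoted lemma, so there is no ``paper proof'' to compare against.

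The step from the uniform bound to strong-operator convergence, however, has a genuine gap. You propose to deduce Cauchyness of the partial sums $\sum_{j\in F}T_jv$ from the two facts $\sum_j\|T_jv\|^2<\infty$ and $\|T_{F'\setminus F}\|\le\sqrt{AB}$, but these two facts by themselves do not imply that the partial sums are Cauchy: $\ell^2$-summability of the individual norms $\|T_jv\|$ controls only the diagonal part of $\|T_Gv\|^2=\sum_{j,k\in G}\langle T_jv,T_kv\rangle$, and the uniform operator bound on tails $T_G$ gives only $\|T_Gv\|\le\sqrt{AB}\,\|v\|$, which does not tend to zero. Concretely, if $G_1,G_2,\ldots$ are disjoint finite blocks with $\|T_{G_m}v\|\ge\delta$, then $\sum_{j\in G_m}\|T_jv\|^2\ge\delta^2/|G_m|$, which is perfectly compatible with $\sum_j\|T_jv\|^2<\infty$ once $|G_m|$ grows, so no contradiction is reached. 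The standard way to close this step is the dense-subspace argument: on a vector of the form $v=\sum_{i=1}^m T_{k_i}^*w_i$ one has $\sum_j\|T_jv\|\le\sum_i\|w_i\|\sum_j\|T_jT_{k_i}^*\|\le\sum_i\|w_i\|\,A\cdot\sup_j\|T_jT_{k_i}^*\|^{1/2}\le A^2\sum_i\|w_i\|<\infty$, so $\sum_j T_jv$ converges absolutely; such $v$ together with $\bigcap_j\ker T_j$ are dense, and the uniform bound $\|T_F\|\le\sqrt{AB}$ then transfers strong convergence to all of the Hilbert space. You need to replace your final sentence with this (or an equivalent, e.g.\ random-sign) argument.
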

} we deduce that $\left\Vert \sum_{j}T_{j}\right\Vert =O\left(\hbar^{\theta}\right)$.
This justifies the second line of (\ref{eq:Mphi_tau_check_suite}).

The operator $\hat{\pi}_{i}^{\left(1\right)}\left(x\right)$ is of
rank one because $\hat{\pi}_{0}\left(\nu\right)$ is a rank one projector
from Lemma \ref{lm:trace_basic}. Its trace is
\begin{eqnarray}
\mathrm{Tr}\left(\hat{\pi}_{j}^{\left(1\right)}\left(x\right)\right) & = & \left(\chi_{\hbar}\circ\kappa_{i}^{-1}\right)\left(x\right)\cdot\mathrm{Tr}\left(\hat{\pi}_{0}\left(\kappa_{i}^{-1}\left(x\right)\right)\mathcal{M}(\psi_{i})\right)\nonumber \\
 & = & \psi_{i}\left(\kappa_{i}^{-1}\left(x\right)\right)\cdot\mathrm{Tr}\left(\hat{\pi}_{0}\left(\kappa_{i}^{-1}\left(x\right)\right)\right)+O\left(\hbar^{\theta}\right)\quad\mbox{by Lemma \ref{lem:kernel_of_pi_nu}}\nonumber \\
 & = & \psi_{i}\left(\kappa_{i}^{-1}\left(x\right)\right)+O\left(\hbar^{\theta}\right)\label{eq:trace_pi_1}
\end{eqnarray}
Hence we have
\begin{equation}
\left(\hat{\pi}_{i}^{\left(1\right)}\left(x\right)\right)^{2}=\psi_{i}\left(\kappa_{i}^{-1}\left(x\right)\right)\cdot\hat{\pi}_{i}^{\left(1\right)}\left(x\right)+O\left(\hbar^{\theta}\right)\label{eq:pi_1_square}
\end{equation}

We define the operator $\hat{\pi}^{\left(2\right)}\left(x\right)$
by summing the operators $\hat{\pi}_{j}^{\left(1\right)}\left(x\right)$
over the different charts:
\[
\hat{\pi}^{\left(2\right)}\left(x\right):=\sum_{1\le i\le I_{\hbar}}\hat{\pi}_{i}^{\left(1\right)}\left(x\right)
\]
Notice that $\hat{\pi}_{i}^{\left(1\right)}\left(x\right)=0$ if $\left(\chi_{\hbar}\circ\kappa_{i}^{-1}\right)\left(x\right)=0$.
Hence in the previous equation, the sum over $j$ contains only a
finite number of terms and this number is bounded uniformly with respect
to $\hbar$. 

In order to proceed with this definition, we need the following lemma.
\begin{lem}
\label{lem:relation_between_pi_j}(1) There exist constants $\epsilon>0$
and $C>0$ such that we have that 
\[
\left\Vert \psi_{j}\left(\kappa_{j}^{-1}\left(x\right)\right)\cdot\hat{\pi}_{k}^{\left(1\right)}\left(x\right)-\psi_{k}\left(\kappa_{k}^{-1}\left(x\right)\right)\cdot\hat{\pi}_{j}^{\left(1\right)}\left(x\right)\right\Vert <C\hbar^{\epsilon}\quad\mbox{for every \ensuremath{1\le j,k\le I_{\hbar}}}.
\]
(2) For any $m>0$, there exists a constant $C_{m}>0$ such that 
\[
\left\Vert \hat{\pi}_{k}^{\left(1\right)}\left(x\right)\circ\hat{\pi}_{j}^{\left(1\right)}\left(y\right)^{*}\right\Vert \leq C_{m}\left\langle \frac{\left|x-y\right|}{\sqrt{\hbar}}\right\rangle ^{-m}\quad\mbox{and}\quad\left\Vert \hat{\pi}_{k}^{\left(1\right)}\left(x\right)^{*}\circ\hat{\pi}_{j}^{\left(1\right)}\left(y\right)\right\Vert \leq C_{m}\left\langle \frac{\left|x-y\right|}{\sqrt{\hbar}}\right\rangle ^{-m}\quad\mbox{for \ensuremath{x,y}\ensuremath{\in}M}.
\]
Consequently we have (for possibly different constant $C_{m}$) that
\[
\left\Vert \hat{\pi}_{k}^{\left(2\right)}\left(x\right)\circ\hat{\pi}_{j}^{\left(2\right)}\left(y\right)^{*}\right\Vert \leq C_{m}\left\langle \frac{\left|x-y\right|}{\sqrt{\hbar}}\right\rangle ^{-m}\quad\mbox{and}\quad\left\Vert \hat{\pi}_{k}^{\left(2\right)}\left(x\right)^{*}\circ\hat{\pi}_{j}^{\left(2\right)}\left(y\right)\right\Vert \leq C_{m}\left\langle \frac{\left|x-y\right|}{\sqrt{\hbar}}\right\rangle ^{-m}\quad\mbox{for \ensuremath{x,y}\ensuremath{\in}M}.
\]

\end{lem}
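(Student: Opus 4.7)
The strategy is to exploit the rank-one structure of each $\hat{\pi}_i^{(1)}(x)$. Since $T^{(0)}$ is the projector onto the constants (a rank-one space), the operator $\hat{\pi}_0(\nu)$ from \eqref{eq:def_projector_pi_alpha} is a rank-one projector onto a normalized wave packet $\phi_\nu \in \mathcal{H}_\hbar^r(\mathbb{R}^{2d})$ concentrated at $\nu$. Unwinding the definition \eqref{eq:def_pi_1}, one may therefore write
\[
\hat{\pi}_i^{(1)}(x)\, u \;=\; \bigl(\Phi_{i,x},\, u\bigr) \cdot \Psi_{i,x},
\]
with $\Psi_{i,x} := \mathbf{I}_{i,\hbar}^*(\phi_{\kappa_i^{-1}(x)})$ and $\Phi_{i,x} := \mathbf{I}_{i,\hbar}^*(\psi_i \cdot \phi_{\kappa_i^{-1}(x)})$ ``localized wave packets'' in $C_N^\infty(P)$ centered at $x \in M$.

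For part (1), when $x \in U_j \cap U_k$ I would compare $\Psi_{j,x}$ with $\Psi_{k,x}$ (and $\Phi_{j,x}$ with $\Phi_{k,x}$). Proposition \ref{prop:Local-charts}(5) factors the chart change $\kappa_{k,j}$ as $A_{k,j}^{-1} \circ g_{k,j}$, with $A_{k,j} \in \mathcal{A}$ an affine symplectic isometry and $g_{k,j}$ a nonlinear correction meeting Setting II. The map $A_{k,j}$ acts covariantly on the standard Bargmann wave packets, sending $\phi_{\kappa_j^{-1}(x)}$ to $e^{i\vartheta(x)} \phi_{\kappa_k^{-1}(x)}$ with some real phase $\vartheta(x)$ (by Lemma \ref{lm:lift_of_LA-1} combined with Lemma \ref{lm:invariance_wrt_translation}), while $g_{k,j}$ acts as the identity modulo $O(\hbar^{\epsilon})$ on the image of $t_{\hbar}^{(0)}$ by Proposition \ref{lm:L_g_Y_almost_identity}. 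The phase $\vartheta(x)$ is exactly cancelled by the section-change phase $e^{iN\alpha_i(p)}$ built into $\mathbf{I}_{i,\hbar}^*$ in \eqref{eq:expression_of_I*}, because both encode the same gauge change on the prequantum bundle, as reflected by \eqref{eq:f_tilde_local}. After weighting by $\psi_j(\kappa_j^{-1}(x))$ and $\psi_k(\kappa_k^{-1}(x))$ and using Corollary \ref{lm:XM_exchange} to commute the multiplication operators past $\mathbf{I}_{i,\hbar}$, this yields the symmetric identity in (1).

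For part (2), the pointwise kernel estimate Lemma \ref{lem:kernel_of_pi_nu} (specialized to $k=0$) and the Gaussian form of the Bargmann kernel \eqref{eq:Bargman_Kernel} give
\[
\bigl\|\hat{\pi}_0(\nu) \circ \hat{\pi}_0(\nu')^*\bigr\|_{\mathcal{H}_\hbar^r(\mathbb{R}^{2d})} \;\le\; C_m \Bigl\langle \tfrac{|\nu-\nu'|}{\sqrt{\hbar}}\Bigr\rangle^{-m}
\]
for arbitrarily large $m$. Inserting this into the composition $\hat{\pi}_k^{(1)}(x)\,\hat{\pi}_j^{(1)}(y)^*$, absorbing the chart-change map $\kappa_{k,j}$ via Proposition \ref{pp:bdd_g} and Lemma \ref{lm:IIbdd}, and using the bi-Lipschitz property of the $\kappa_i$ with constants uniform in $\hbar$ (so that $|\kappa_k^{-1}(x) - \kappa_{k,j}(\kappa_j^{-1}(y))| \asymp |x-y|$ on the scales that matter) yields the first estimate; the companion one follows by taking adjoints. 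The estimate for $\hat{\pi}^{(2)}$ is then immediate, since only $O(1)$ pairs $(i,j)$ contribute to the relevant sum by the finite overlap property \eqref{eq:bound_on_i_sim_j}.

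The main obstacle will be the careful bookkeeping of the phase $e^{iN\alpha_i(p)}$ appearing in $\mathbf{I}_{i,\hbar}^*$: this phase is large (of order $N = 1/(2\pi\hbar)$) and depends nontrivially on the choice of local section $\tau_i$. One must verify that the section-change phase combines \emph{exactly} with the metaplectic phase produced by the Bargmann transform under the affine symplectic map $A_{k,j}$, so that the rank-one projectors $\hat{\pi}_i^{(1)}(x)$ — which ought to be gauge invariant as operators on $\mathcal{H}_\hbar^r(P)$ — truly coincide to the claimed order once weighted by the partition of unity. Once this cancellation is in place, the remainder is a routine Gaussian estimate.
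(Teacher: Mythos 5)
Your overall strategy resembles the paper's, but the crucial technical step is left open and, as stated, contains an error. The paper's proof splits each $\hat{\pi}_j^{(1)}(x)$ using the phase-space cutoff $\mathcal{Y}_\hbar$, i.e.\ writes $\hat{\pi}_j^{(1)}(x)=\mathbf{I}_{j,\hbar}^*\circ\mathcal{Y}_\hbar\circ\hat\pi_0(\cdot)\circ\mathcal{M}(\psi_j)\circ\mathbf{I}_{j,\hbar}+\text{(remainder)}$, estimates the remainder as $O(\hbar^\theta)$ directly from the kernel bound of Lemma \ref{lem:kernel_of_pi_nu}, and handles the chart-change map on the truncated part by the kernel estimate \eqref{eq:kernel_estimate_nonlinear} established in the proof of Proposition \ref{lm:L_g_Y_almost_identity}. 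That route works entirely with kernels of lifted operators; no explicit phase-tracking is needed, because affine isometries in $\mathcal{A}$ are unitary on $\mathcal{H}_\hbar^r(\real^{2d})$ and commute with $t_\hbar^{(k)}$ (Lemma \ref{lm:invariance_wrt_translation}, Remark \ref{rem:Aij}), while the nonlinearity is controlled on the $\mathcal{Y}_\hbar$-truncated range rather than on $\mathrm{Im}\,t_\hbar^{(0)}$.

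Your proposal instead unravels $\hat{\pi}_i^{(1)}(x)$ into an explicit rank-one operator built from wave packets, decomposes $\kappa_{k,j}=A_{k,j}^{-1}\circ g_{k,j}$, and asserts that the metaplectic phase produced by $A_{k,j}$ on the Bargmann packet is \emph{exactly} cancelled by the $e^{iN\alpha_i(p)}$ built into $\mathbf{I}_{i,\hbar}^*$. This assertion is the gap, and you flag it yourself. Two concrete problems. First, $A_{k,j}$ does not map $\kappa_j^{-1}(x)$ to $\kappa_k^{-1}(x)$: by Proposition \ref{prop:Local-charts}(5), $g_{k,j}=A_{k,j}\circ\kappa_{k,j}$ fixes the origin, so $A_{k,j}(\kappa_k^{-1}(x))=g_{k,j}(\kappa_j^{-1}(x))$, and the affine map moves the packet centre only up to the $O(\hbar^{\beta(1/2-\theta)})$ displacement caused by $g_{k,j}$. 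The ``exact'' cancellation of centres and phases is therefore false as written; you can only hope for agreement modulo $O(\hbar^\epsilon)$, and that error must be estimated, which brings you back to a version of the kernel argument the paper actually gives. Second, the citation for the claim that ``$g_{k,j}$ acts as the identity modulo $O(\hbar^\epsilon)$ on $\mathrm{Im}\,t_\hbar^{(0)}$'' should be Lemma \ref{lm:L_g_almost_identity}, not Proposition \ref{lm:L_g_Y_almost_identity} (the latter requires the extra $\mathcal{Y}_\hbar$-truncation precisely because $\mathcal{L}_g-\mathrm{Id}$ is not small on $\mathcal{H}_\hbar^r(\real^{2d})$ without it). Your part (2) argument via Gaussian kernel decay and finite overlap of the cover is essentially sound and matches the paper's. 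In short: the ingredients you list are correct, but the claimed phase identity is both unproved and too strong; closing the gap would require the weaker, quantitative comparison that the paper's $\mathcal{Y}_\hbar$-truncated kernel estimate supplies directly.
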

The claim of the lemma above may look rather obvious. But note that
the relation between the operators $\hat{\pi}_{j}^{\left(1\right)}\left(x\right)$
(or its adjoint) for different indices $j$ involves the coordinate
change transformation $\kappa_{j,i}$ , which is close to the identity
but non-linear. So we have to go through an argument similar to that
in Section \ref{sec:Nonlinear-prequantum-maps}. We give a proof of
Lemma \ref{lem:relation_between_pi_j} at the end. 

Note that the operator $\hat{\pi}^{\left(2\right)}\left(x\right)$
is not rank one in general. We are going to construct a rank one projection
operator $\hat{\pi}(x)$ from $\hat{\pi}^{\left(2\right)}\left(x\right)$
as its spectral projector. First we check 
\begin{equation}
\mathrm{Tr}\left(\hat{\pi}^{\left(2\right)}\left(x\right)\right)\underset{(\ref{eq:trace_pi_1})}{=}\sum_{j}\psi_{j}\left(\kappa_{j}^{-1}\left(x\right)\right)+O\left(\hbar^{\theta}\right)=1+O\left(\hbar^{\theta}\right)\label{eq:trace_pi_2}
\end{equation}
This is because the term $\psi_{j}\left(\kappa_{j}^{-1}\left(x\right)\right)\cdot\hat{\pi}_{l}^{\left(1\right)}\left(x\right)$
is not zero only if $j$ and $l$ are indices for intersecting local
charts. Hence, using the first claim of Lemma \ref{lem:relation_between_pi_j},
we see
\begin{eqnarray}
\psi_{j}\left(\kappa_{j}^{-1}\left(x\right)\right)\cdot\hat{\pi}^{\left(2\right)}\left(x\right) & = & \sum_{l}\psi_{j}\left(\kappa_{j}^{-1}\left(x\right)\right)\hat{\pi}_{l}^{\left(1\right)}\left(x\right)\underset{}{=}\sum_{l}\psi_{l}\left(\kappa_{l}^{-1}\left(x\right)\right)\cdot\hat{\pi}_{j}^{\left(1\right)}\left(x\right)+O\left(\hbar^{\epsilon}\right)\nonumber \\
 & = & \hat{\pi}_{j}^{\left(1\right)}\left(x\right)\sum_{l}\psi_{l}\left(\kappa_{l}^{-1}\left(x\right)\right)+O\left(\hbar^{\theta}\right)=\hat{\pi}_{j}^{\left(1\right)}\left(x\right)+O\left(\hbar^{\epsilon}\right)\label{eq:pi_1_pi_2}
\end{eqnarray}
and
\begin{eqnarray*}
\left(\psi_{j}\left(\kappa_{j}^{-1}\left(x\right)\right)\cdot\hat{\pi}^{\left(2\right)}\left(x\right)\right)^{2} & \underset{(\ref{eq:pi_1_pi_2})}{=} & \left(\hat{\pi}_{j}^{\left(1\right)}\left(x\right)\right)^{2}+O\left(\hbar^{\theta}\right)\underset{(\ref{eq:pi_1_square})}{=}\psi_{j}\left(\kappa_{j}^{-1}\left(x\right)\right)\cdot\hat{\pi}_{j}^{\left(1\right)}\left(x\right)+O\left(\hbar^{\epsilon}\right)\\
 & = & \left(\psi_{j}\left(\kappa_{j}^{-1}\left(x\right)\right)\right)^{2}\cdot\hat{\pi}^{\left(2\right)}\left(x\right)+O\left(\hbar^{\epsilon}\right)
\end{eqnarray*}
So choosing $j$ such that $\psi_{j}\left(\kappa_{j}^{-1}\left(x\right)\right)>1/C$
we get 
\begin{equation}
\left(\hat{\pi}^{\left(2\right)}\left(x\right)\right)^{2}=\hat{\pi}^{\left(2\right)}\left(x\right)+O\left(\hbar^{\epsilon}\right)\label{eq:pi_2_square}
\end{equation}
Hence from (\ref{eq:trace_pi_2}) and (\ref{eq:pi_2_square}), the
operator $\hat{\pi}^{\left(2\right)}\left(x\right)$ has an isolated
simple eigenvalue $\lambda_{1}$ at a distance $O\left(\hbar^{\epsilon}\right)$
to $1$ and the rest of its spectrum is at distance $O\left(\hbar^{\epsilon}\right)$
of the origin. We consider a fixed path $\gamma$ in $\mathbb{C}$
with center $1$ of radius $C\hbar^{\epsilon}$ (with large $C$)
so that by a Dunford integral, we get the spectral projector of the
operator $\hat{\pi}^{\left(2\right)}\left(x\right)$ for its simple
eigenvalue $\lambda_{1}$ and we denote it: 
\[
\hat{\pi}(x):=\frac{1}{2\pi i}\oint_{\gamma}\left(z-\hat{\pi}^{\left(2\right)}\left(x\right)\right)^{-1}dz.
\]
Note that we may rewrite this definition as 
\begin{equation}
\hat{\pi}(x)=\left(\frac{1}{2\pi i}\oint_{\gamma}\frac{1}{z}\left(z-\hat{\pi}^{\left(2\right)}\left(x\right)\right)^{-1}dz\right)\circ\hat{\pi}^{\left(2\right)}\left(x\right)=\hat{\pi}^{\left(2\right)}\left(x\right)\circ\left(\frac{1}{2\pi i}\oint_{\gamma}\frac{1}{z}\left(z-\hat{\pi}^{\left(2\right)}\left(x\right)\right)^{-1}dz\right)\label{eq:expression_hat_pi}
\end{equation}
The operators $\hat{\pi}(x)$ are rank one projection operators depending
smoothly on $x\in M$. There exists a constant $C>0$ independent
of $\hbar$ such that $\left\Vert \hat{\pi}(x)\right\Vert \leq C$
and that $\left\Vert \hat{\pi}^{\left(2\right)}\left(x\right)-\hat{\pi}(x)\right\Vert <C\hbar^{\epsilon}$.
(For these estimates, we refer the proof of Lemma \ref{lm:pi_P} in
Appendix \ref{sub:appendix2} where the argument is in parallel.)
Hence 
\begin{equation}
\psi_{j}\left(\kappa_{j}^{-1}\left(x\right)\right)\cdot\hat{\pi}(x)\underset{(\ref{eq:pi_1_pi_2})}{=}\hat{\pi}_{j}^{\left(1\right)}\left(x\right)+O\left(\hbar^{\epsilon}\right).\label{eq:pi_pi_1}
\end{equation}
Continuing (\ref{eq:Mphi_tau_check_suite}), we deduce that
\begin{eqnarray}
\mathcal{M}\left(\varphi\right)\checktau_{\hbar}^{(0)} & = & \frac{1}{\left(2\pi\hbar\right)^{d}}\int_{M}\sum_{j=1}^{I_{\hbar}}\varphi\left(x\right)\hat{\pi}_{j}^{\left(1\right)}\left(x\right)dx+O\left(\hbar^{\theta}\right)\nonumber \\
 & \underset{(\ref{eq:pi_pi_1})}{=} & \frac{1}{\left(2\pi\hbar\right)^{d}}\int_{M}\sum_{j=1}^{I_{\hbar}}\varphi\left(x\right)\psi_{j}\left(\kappa_{j}^{-1}\left(x\right)\right)\cdot\hat{\pi}(x)\, dx+O\left(\hbar^{\epsilon}\right)\label{eq:estimate_4}\\
 & \underset{}{=} & \frac{1}{\left(2\pi\hbar\right)^{d}}\int_{M}\varphi\left(x\right)\cdot\hat{\pi}(x)\, dx+O\left(\hbar^{\epsilon}\right)\nonumber 
\end{eqnarray}
In the second line above, we have to use again the integral version
of the Cotlar-Stein Lemma \ref{lem:Integral-version-of-Cotlar-Stein}.
We do this in two steps, let us consider the family of operators 
\[
T\left(x\right):=T_{1}\left(x\right)+T_{2}\left(x\right)=\sum_{j=1}^{I_{\hbar}}\varphi\left(x\right)\left(\psi_{j}\left(\kappa_{j}^{-1}\left(x\right)\right)\cdot\hat{\pi}(x)-\hat{\pi}_{j}^{\left(1\right)}\left(x\right)\right)
\]
with
\[
T_{1}\left(x\right):=\sum_{j=1}^{I_{\hbar}}\varphi\left(x\right)\left(\psi_{j}\left(\kappa_{j}^{-1}\left(x\right)\right)\cdot\hat{\pi}^{\left(2\right)}\left(x\right)-\hat{\pi}_{j}^{\left(1\right)}\left(x\right)\right)
\]
\[
T_{2}\left(x\right):=\sum_{j=1}^{I_{\hbar}}\varphi\left(x\right)\cdot\psi_{j}\left(\kappa_{j}^{-1}\left(x\right)\right)\cdot\left(\hat{\pi}(x)-\hat{\pi}^{\left(2\right)}\left(x\right)\right)=\varphi\left(x\right)\left(\hat{\pi}(x)-\hat{\pi}^{\left(2\right)}\left(x\right)\right).
\]
From Lemma \ref{lem:relation_between_pi_j} and the expression (\ref{eq:expression_hat_pi})
of $\hat{\pi}\left(x\right)$, we see that, for $k=1,2$,
\[
\left\Vert T_{k}\left(x\right)T_{k}\left(y\right)^{*}\right\Vert _{\mathcal{H}_{\hbar}^{r}(\real^{2d})}\le C\hbar^{2\epsilon}\quad\mbox{and}\quad\left\Vert T_{k}\left(x\right)^{*}T_{k}\left(y\right)\right\Vert _{\mathcal{H}_{\hbar}^{r}(\real^{2d})}\le C\hbar^{2\epsilon}
\]
and also
\[
\left\Vert T_{k}\left(x\right)T_{k}\left(y\right)^{*}\right\Vert _{\mathcal{H}_{\hbar}^{r}(\real^{2d})}\le C_{m}\left\langle \frac{\left|x-y\right|}{\sqrt{\hbar}}\right\rangle ^{-m}\quad\mbox{and}\quad\left\Vert T_{k}\left(x\right)^{*}T_{k}\left(y\right)\right\Vert _{\mathcal{H}_{\hbar}^{r}(\real^{2d})}\le C_{m}\left\langle \frac{\left|x-y\right|}{\sqrt{\hbar}}\right\rangle ^{-m}
\]
for any $m>0$ with a constant $C_{m}$ independent of $\hbar$. It
follows
\[
\left\Vert T_{k}\left(x\right)T_{k}\left(y\right)^{*}\right\Vert _{\mathcal{H}_{\hbar}^{r}(\real^{2d})}\le C_{m}\hbar^{\epsilon}\cdot\left\langle \frac{\left|x-y\right|}{\sqrt{\hbar}}\right\rangle ^{-m}\quad\mbox{and}\quad\left\Vert T_{k}\left(x\right)^{*}T_{k}\left(y\right)\right\Vert _{\mathcal{H}_{\hbar}^{r}(\real^{2d})}\le C_{m}\hbar^{\epsilon}\cdot\left\langle \frac{\left|x-y\right|}{\sqrt{\hbar}}\right\rangle ^{-m}.
\]
 Then Cotlar-Stein Lemma \ref{lem:Integral-version-of-Cotlar-Stein}
implies that $\frac{1}{\left(2\pi\hbar\right)^{d}}\left\Vert \int_{M}T_{k}\left(x\right)dx\right\Vert _{\mathcal{H}_{\hbar}^{r}(\real^{2d})}=O\left(\hbar^{\epsilon}\right)$
and therefore $\frac{1}{\left(2\pi\hbar\right)^{d}}\left\Vert \int_{M}T\left(x\right)dx\right\Vert _{\mathcal{H}_{\hbar}^{r}(\real^{2d})}=O\left(\hbar^{\epsilon}\right)$,
giving (\ref{eq:estimate_4}) above. We have finally obtained that
in norm operator:

\[
\left\Vert \mathcal{M}\left(\varphi\right)\checktau_{\hbar}^{(0)}-\frac{1}{\left(2\pi\hbar\right)^{d}}\int_{M}\varphi\left(x\right)\cdot\hat{\pi}(x)\, dx\right\Vert _{\mathcal{H}_{\hbar}^{r}(\real^{2d})}\leq C\hbar^{\epsilon}
\]
With Lemma \ref{lm:pi_P} and (\ref{eq:pi_cT1}), we deduce that
\[
\left\Vert \mathcal{M}\left(\varphi\right)\tau_{\hbar}^{(0)}-\frac{1}{\left(2\pi\hbar\right)^{d}}\int_{M}\varphi\left(x\right)\cdot\hat{\pi}(x)\, dx\right\Vert _{\mathcal{H}_{\hbar}^{r}(\real^{2d})}\leq C\hbar^{\varepsilon}
\]
We can argue just in parallel manner to give
\begin{equation}
\left\Vert \tau_{\hbar}^{(0)}\mathcal{M}\left(\varphi\right)-\frac{1}{\left(2\pi\hbar\right)^{d}}\int_{M}\varphi\left(x\right)\cdot\hat{\pi}(x)\, dx\right\Vert _{\mathcal{H}_{\hbar}^{r}(\real^{2d})}\leq C\hbar^{\varepsilon}\label{eq:trace_estimate_on_tauMphi}
\end{equation}
Therefore 
\[
\left\Vert \left[\tau_{\hbar}^{(0)},\mathcal{M}\left(\varphi\right)\right]\right\Vert _{\mathcal{H}_{\hbar}^{r}(\real^{2d})}\leq C\hbar^{\varepsilon}.
\]

Suppose that $r_{1}^{+}<r_{0}^{-}$. We have defined in (\ref{eq:def_Pi_hbar})
by $\Pi_{\hbar}$ the finite rank spectral projector on the external
band of $\hat{F}_{N}$. We define 
\[
\pi_{x}:=\Pi_{\hbar}\circ\hat{\pi}\left(x\right)\circ\Pi_{\hbar}.
\]
 We have shown in (\ref{eq:diff_projectors}) that 
\begin{equation}
\left\Vert \tau_{\hbar}^{(0)}-\Pi_{\hbar}\right\Vert <C\hbar^{\epsilon}\label{eq:estimate_Pi_0}
\end{equation}
Therefore we deduce from above that
\[
\left\Vert \Pi_{\hbar}\mathcal{M}\left(\varphi\right)\Pi_{\hbar}-\frac{1}{\left(2\pi\hbar\right)^{d}}\int_{M}\varphi\left(x\right)\cdot\pi_{x}dx\right\Vert \leq C\hbar^{\varepsilon}
\]
and
\[
\left\Vert \left[\Pi_{\hbar},\mathcal{M}\left(\varphi\right)\right]\right\Vert \leq C\hbar^{\varepsilon}.
\]
This finishes the proof of Theorem \ref{prop:express_Op_Psi}. Finally
we prove Lemma \ref{lem:relation_between_pi_j}. 
\begin{proof}[Proof of Lemma \ref{lem:relation_between_pi_j} ]
Recall the operator $\mathcal{Y}_{\hbar}$ which truncate the functions
in the phase space. We decompose the operators $\hat{\pi}_{j}^{(1)}(\nu)$
into 
\[
\hat{\pi}_{j,1}^{\left(1\right)}\left(x\right):=\mathbf{I}_{i,\hbar}^{*}\circ\mathcal{\mathcal{Y}_{\hbar}}\circ\hat{\pi}_{0}\left(\kappa_{j}^{-1}\left(x\right)\right)\circ\mathcal{M}(\psi_{j})\circ\mathbf{I}_{i,\hbar}
\]
and 
\[
\hat{\pi}_{j,2}^{\left(1\right)}\left(x\right):=\mathbf{I}_{i,\hbar}^{*}\circ(1-\mathcal{\mathcal{Y}_{\hbar}})\circ\hat{\pi}_{0}\left(\kappa_{j}^{-1}\left(x\right)\right)\circ\mathcal{M}(\psi_{j})\circ\mathbf{I}_{i,\hbar}.
\]
For the former part $\hat{\pi}_{j,1}^{\left(1\right)}\left(x\right)$,
we apply Proposition \ref{lm:L_g_Y_almost_identity} and the estimate
(\ref{eq:kernel_estimate_nonlinear}) on the kernel in the proof to
see that the non-linear coordinate change transformation hardly affect
this part. Thus we have both of the claims when we replace the operator
$\hat{\pi}_{j}^{(1)}(\nu)$ by $\hat{\pi}_{j,1}^{(1)}(\nu)$. For
the latter part, we have
\[
\|\hat{\pi}_{j,2}^{\left(1\right)}\left(x\right)\|_{\mathcal{H}_{\hbar}^{r}(\real^{2d})}<C\hbar^{\theta}
\]
from Lemma \ref{lem:kernel_of_pi_nu}. This completes the proof of
the first claim (1). By inspecting the kernels of the lifted operators,
we also see
\[
\|\left(\hat{\pi}_{k}^{(1)}(x)\right)^{*}\circ\hat{\pi}_{j,2}^{\left(1\right)}\left(x\right)\|_{\mathcal{H}_{\hbar}^{r}(\real^{2d})}=\mathcal{O}(\hbar^{\infty}),\quad\|\hat{\pi}_{k}^{(1)}(x)\circ\left(\hat{\pi}_{j,2}^{\left(1\right)}\left(x\right)\right)^{*}\|_{\mathcal{H}_{\hbar}^{r}(\real^{2d})}=\mathcal{O}(\hbar^{\infty}).
\]
Since we have only to consider points $x,y\in M$ with $d(x,y)\le C\hbar^{1/2-\theta}$,
this gives the second claim (2). \end{proof}

\section{\label{sec:Proofs-for-Laplacian}Proof of Theorem \ref{thm:band_structure-of_Laplacian}
for the spectrum of the rough Laplacian.}

\subsection{\label{sub:The-harmonic-oscillator3.5}The harmonic oscillator}

\label{ss:Qhk} In this subsection we present the harmonic oscillator
in the setting of Bargmann transform. (We refer \cite{folland-88},\cite{zworski-03}
for a more detailed treatment.) We will need it in dealing with the
(Euclidean) rough Laplacian $\Delta_{\hbar}$ in Section \ref{sub:The-rough-Laplacian_4.6}.
Associated to the standard coordinates
\[
(x,\xi)=(x_{1},x_{2},\cdots,x_{D},\xi_{1},\xi_{2},\cdots,\xi_{D})
\]
 on $T^{*}\real^{D}=\real^{2D}$, we consider the operators 
\[
\hat{x}_{i}:\begin{cases}
\mathcal{S}(\real^{2D}) & \to\mathcal{S}(\real^{2D})\\
u & \rightarrow\left(\BargmannP_{\hbar}\circ\multiplication(x_{i})\circ\BargmannP_{\hbar}\right)u
\end{cases}\quad\mbox{{and} \quad}\hat{\xi}_{i}:\begin{cases}
\mathcal{S}(\real^{2D}) & \to\mathcal{S}(\real^{2D})\\
u & \rightarrow\left(\BargmannP_{\hbar}\circ\multiplication(\xi_{i})\circ\BargmannP_{\hbar}\right)u
\end{cases}
\]
where $\multiplication(x_{i})$ and $\multiplication(\xi_{i})$ on
the right hand sides denote the multiplication by the corresponding
functions and $\mathcal{P}_{\hbar}$ is the Bargmann projector (\ref{eq:def_Bargmann_projector}).
These operators are usually called Toeplitz quantization of the functions
$x_{i}$ and $\xi_{i}$. Then we set

\begin{equation}
\hat{P}:=\frac{1}{2\hbar}\left(\hat{x}^{2}+\hat{\mathbf{\xi}}^{2}\right):=\frac{1}{2\hbar}\sum_{i=1}^{D}\left(\hat{x}_{i}\circ\hat{x}_{i}+\hat{\xi}_{i}\circ\hat{\xi}_{i}\right),\label{eq:Zhbar}
\end{equation}
\[
\mathscr{H}:=\Bargmann_{\hbar}^{*}\hat{P}\Bargmann_{\hbar},\quad L^{2}(\real^{D})\to L^{2}(\real^{D})
\]

which is usually called the \emph{harmonic oscillator} operator. (the
operators $\Bargmann_{\hbar}^{*},\Bargmann_{\hbar}$ are defined in
Section \ref{sub:The-Bargmann-transform}).
\begin{lem}
\label{lm:spec_Z}\textbf{''Spectrum of the harmonic oscillator''.}
The operator $\mathscr{H}$ in (\ref{eq:Zhbar}) is a closed self-adjoint
operator on $L^{2}(\real^{D})$ and its spectral set consists of eigenvalues
\[
\frac{D}{2}+k,\quad k\in\mathbb{N}.
\]
For every $k\in\mathbb{N}$, the spectral projector $Q_{\hbar}^{(k)}$
for the eigenvalue $\frac{D}{2}+k$ is an orthogonal projection operator
of rank $\binom{D+k-1}{D-1}$. We have 
\begin{equation}
\bigoplus_{i=0}^{k}\mathrm{Im}Q_{\hbar}^{(i)}=\{\varphi_{0}.p\mid\quad p\mbox{ is a polynomial of degree }\leq k\}\label{eq:ImQ}
\end{equation}
 where 
\[
\varphi_{0}(x)=e^{-|x|^{2}/(2\hbar)}.
\]
 In particular, we have the following orthogonal decomposition of
$L^{2}(\real^{D})$: 
\[
L^{2}(\real^{D})=\overline{\bigoplus_{k=0}^{\infty}\mathrm{{Im}}Q_{\hbar}^{(k)}}.
\]
\end{lem}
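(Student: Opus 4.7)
The plan is to use the Bargmann transform to diagonalize $\mathscr{H}$. Since $\Bargmann_{\hbar}:L^{2}(\real^{D})\to\mathrm{Im}\,\Bargmann_{\hbar}=\mathrm{Im}\,\BargmannP_{\hbar}$ is an isometric isomorphism (Lemma \ref{lm:Bargmann_is_isometry} and Proposition \ref{prop:The-Bargmann-projector}), the operator $\mathscr{H}$ on $L^{2}(\real^{D})$ is unitarily equivalent to the restriction of $\hat{P}=\BargmannP_{\hbar}\circ\multiplication((|x|^{2}+|\xi|^{2})/(2\hbar))\circ\BargmannP_{\hbar}$ to the Bargmann space $\mathrm{Im}\,\BargmannP_{\hbar}\subset L^{2}(\real^{2D})$. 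By the scaling argument in Subsection \ref{sub:Scaling}, it suffices to treat the case $\hbar=1$.

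The key computation is to rewrite $\hat{P}$ restricted to $\mathrm{Im}\,\BargmannP_{1}$ in holomorphic coordinates. Introducing $z_{j}=x_{j}-i\xi_{j}$, I would show by direct manipulation of the kernel (\ref{eq:Bargman_Kernel}) that $\mathrm{Im}\,\BargmannP_{1}$ consists exactly of the functions of the form $F(z)\cdot e^{-|z|^{2}/4}$ where $F:\mathbb{C}^{D}\to\mathbb{C}$ is entire, square-integrable against the Gaussian weight (the classical Fock/Bargmann realization). Under this identification, the Toeplitz operators $\hat{x}_{j}$ and $\hat{\xi}_{j}$ may be computed: multiplication by $z_{j}$ preserves the Fock space, so $\hat{z}_{j}:=\hat{x}_{j}-i\hat{\xi}_{j}$ acts as $F\mapsto z_{j}F$; multiplication by $\bar z_{j}$, after projection, reduces to a first-order differential operator in $z_{j}$ because the Gaussian projection kills the antiholomorphic component. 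By the standard Toeplitz/Wick calculus, one obtains
\begin{equation*}
\hat{P}_{|\mathrm{Im}\,\BargmannP_{1}}=\frac{D}{2}+\sum_{j=1}^{D}z_{j}\partial_{z_{j}},
\end{equation*}
i.e.\ $\hat{P}$ equals $D/2$ plus the total number operator.

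Once this identity is established, the spectral picture is immediate: the eigenspace for $D/2+k$ corresponds to entire functions $F$ homogeneous of degree $k$, which is the finite-dimensional space $\Polynomial^{(k)}(\mathbb{C}^{D})$ of complex dimension $\binom{D+k-1}{D-1}$; density of polynomials in the Fock space (standard) gives the orthogonal decomposition, and self-adjointness follows as $\hat{P}$ becomes a direct sum of finite-dimensional self-adjoint blocks with real eigenvalues tending to infinity. Transferring back via $\Bargmann_{\hbar}^{*}$, the corresponding eigenspace in $L^{2}(\real^{D})$ is the image under $\Bargmann_{\hbar}^{*}$ of $\{z^{\alpha}e^{-|z|^{2}/4}:|\alpha|=k\}$; applying the formula (\ref{eq:B*}) to monomials times the Gaussian and performing elementary Gaussian integration in $y$ (via (\ref{eq:Gaussian_integrale})) produces functions of the form $H_{\alpha}(y)\varphi_{0}(y)$ with $H_{\alpha}$ a polynomial of degree $|\alpha|$ (essentially rescaled Hermite polynomials). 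A dimension count then gives (\ref{eq:ImQ}).

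The main obstacle will be the bookkeeping in the second paragraph: identifying $\mathrm{Im}\,\BargmannP_{1}$ with the classical Fock space and computing the conjugates of $\hat{x}_{j},\hat{\xi}_{j}$ under this identification. In particular, one must track carefully the Gaussian widths (the $e^{-|z|^{2}/4}$ appearing in $K_{\mathcal{P},\hbar}$ versus the standard $e^{-|z|^{2}/2}$ Fock weight) and the normalization of the measure $dx\,d\xi/(2\pi\hbar)^{D}$, which differs by a factor of $2$ from the usual Fock convention. Once the identification as number operator on the Fock space is in place, the remainder of the argument is routine.
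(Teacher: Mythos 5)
Your proposal is correct, but it takes a genuinely different route from the paper. The paper's proof is a one-line reduction in position space: since anti-Wick (Toeplitz) quantization sends the linear symbols $x_{i}$ and $\xi_{i}$ exactly to the multiplication operator $u\mapsto x_{i}u$ and to $-i\hbar\partial_{x_{i}}$, and since the identities $\Bargmann_{\hbar}^{*}\Bargmann_{\hbar}=\mathrm{Id}$ and $\BargmannP_{\hbar}=\Bargmann_{\hbar}\Bargmann_{\hbar}^{*}$ make the conjugation $\Bargmann_{\hbar}^{*}(\cdot)\Bargmann_{\hbar}$ multiplicative on products of Toeplitz operators, the operator $\mathscr{H}=\Bargmann_{\hbar}^{*}\hat{P}\Bargmann_{\hbar}$ is \emph{literally} the standard quantum harmonic oscillator $\frac{1}{\hbar}\left(-\frac{\hbar^{2}}{2}\Delta+\frac{1}{2}|x|^{2}\right)$ on $L^{2}(\real^{D})$; the spectral statement is then cited as classical. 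You instead stay on the phase-space side and diagonalize the lift $\hat{P}$ on $\mathrm{Im}\,\BargmannP_{\hbar}$ directly, identifying it with the Fock space of entire functions in $z_{j}=x_{j}-i\xi_{j}$ (your choice of sign is correct: the kernel (\ref{eq:Bargman_Kernel}) factorizes as $e^{-|z|^{2}/4\hbar}e^{\bar z z'/2\hbar}e^{-|z'|^{2}/4\hbar}$, so $\Bargmann_{\hbar}u$ is holomorphic in $x-i\xi$), and computing $\hat{P}=\frac{D}{2}+\sum_{j}z_{j}\partial_{z_{j}}$ via the Wick relations $\hat{z}_{j}=\multiplication(z_{j})$, $\hat{\bar z}_{j}=2\hbar\partial_{z_{j}}$ and the commutator contribution $-i[\hat{x}_{j},\hat{\xi}_{j}]=\hbar$ that produces the $D/2$. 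Both are standard and both are correct; the paper's is shorter because it offloads everything onto the classical harmonic-oscillator literature, while yours gives more structural information on the Bargmann side (homogeneous polynomials in $z$ as eigenspaces) which meshes naturally with the projection operators $T^{(k)}$ appearing in Lemma \ref{lm:QTproj}. The bookkeeping issues you flag (Gaussian width $e^{-|z|^{2}/4\hbar}$ vs.\ the usual Fock weight, and the normalization of $dxd\xi/(2\pi\hbar)^{D}$) are exactly the points that need care, and your stated formulas survive them.
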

\begin{proof}
Since $\hat{{x}}_{i}$ and $\hat{\xi_{i}}$ are the lift of the operators
\[
u\mapsto x_{i}\cdot u\quad\mbox{and}\quad u\mapsto-i\hbar\cdot\partial_{x_{i}}u
\]
 respectively, the operator $\hat{P}=\frac{1}{2\hbar}\left(\hat{x}^{2}+\hat{\mathbf{\xi}}^{2}\right)$
is the lift of 
\begin{equation}
\mathscr{H}:L^{2}(\real^{D})\to L^{2}(\real^{D}),\qquad\left(\mathscr{H}u\right)(x)=\frac{1}{\hbar}\left(-\frac{\hbar^{2}}{2}\left(\sum_{i=1}^{D}\frac{\partial^{2}u}{\partial x_{i}^{2}}\right)(x)+\frac{1}{2}|x|^{2}\cdot u(x)\right)\label{eq:def_Harmonic_Oscillator}
\end{equation}
 Therefore the conclusion follows from the argument on quantization
of the harmonic oscillator $\mathscr{H}$ \cite[p.105]{taylor_tome2}.\end{proof}
\begin{rem}
It is possible to compute directly that the eigenfunctions of $\hat{P}$
using the ``creation operator'' $a_{j}=\hat{x}_{j}+i\hat{\xi}_{j}$
and showing that it is multiplication by $z_{j}=x_{j}+i\xi_{j}$ in
phase space (see \cite{folland-88,hall99}). Then we can identify
$\mathcal{Q}_{\hbar}^{(i)}=\Bargmann_{\hbar}^{*}Q_{\hbar}^{\left(i\right)}\Bargmann_{\hbar}$
as the projection onto homogeneous polynomials of degree $i$ in $\left(z_{j}\right)_{j}$.
\end{rem}
Recall that operators $Q_{\hbar}^{(k)}$ and $T^{\left(k\right)}$
(Section \ref{sub:Spectrum-of-transfer_3.4}) have the same rank $\binom{D+k-1}{D-1}$.
The next lemma gives a more precise relation between them.
\begin{lem}
\label{lm:QTproj} For $0\le k\le n$, the operator $\mathcal{Q}_{\hbar}^{(k)}=\Bargmann_{\hbar}^{*}Q_{\hbar}^{\left(k\right)}\Bargmann_{\hbar}$
extends to a continuous operator 
\[
\mathcal{Q}_{\hbar}^{(k)}:\mathcal{S}'(\real^{2D})\to\mathcal{S}(\real^{2D}).
\]
The restrictions: 
\begin{equation}
\left(\oplus_{i=0}^{k}\mathcal{Q}_{\hbar}^{(i)}\right):\oplus_{i=1}^{k}\mathrm{Im}\calT_{\hbar}^{(i)}\to\oplus_{i=0}^{k}\mathrm{Im}\mathcal{Q}_{\hbar}^{(i)}\label{eq:QT_bijection2}
\end{equation}
 and 
\begin{equation}
\left(\oplus_{i=0}^{k}\calT_{\hbar}^{(i)}\right):\oplus_{i=0}^{k}\mathrm{Im}\mathcal{Q}_{\hbar}^{(i)}\to\oplus_{i=0}^{k}\mathrm{Im}\calT_{\hbar}^{(i)}\label{eq:TQ_bijection2}
\end{equation}
are well-defined and bijective. The operator norms of (\ref{eq:QT_bijection2}),
(\ref{eq:TQ_bijection2}) and their inverses are bounded by a constant
independent of $\hbar$.\end{lem}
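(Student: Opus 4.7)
My plan has three stages. First, I would identify the finite dimensional subspaces under the Bargmann transform. Since $\calT_{\hbar}^{(i)}=\Bargmann_{\hbar}\circ T^{(i)}\circ\Bargmann_{\hbar}^{*}$ and $T^{(i)}$ is a projection onto $\Polynomial^{(i)}$, the relation $\Bargmann_{\hbar}^{*}\circ\Bargmann_{\hbar}=\mathrm{Id}$ immediately gives $\mathrm{Im}\,\calT_{\hbar}^{(i)}=\Bargmann_{\hbar}(\Polynomial^{(i)})$, and symmetrically $\mathrm{Im}\,\mathcal{Q}_{\hbar}^{(i)}=\Bargmann_{\hbar}(\mathrm{Im}\,Q_{\hbar}^{(i)})$ once the bracketing convention is set (reading the composition so that $\mathcal{Q}_{\hbar}^{(k)}=\Bargmann_{\hbar}\circ Q_{\hbar}^{(k)}\circ\Bargmann_{\hbar}^{*}$ acts on $\real^{2D}$-functions). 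Under these identifications, the maps in (\ref{eq:QT_bijection2}) and (\ref{eq:TQ_bijection2}) become, respectively, $\oplus_{i\le k}Q_{\hbar}^{(i)}:\oplus_{i\le k}\Polynomial^{(i)}\to\oplus_{i\le k}\mathrm{Im}\,Q_{\hbar}^{(i)}$ and $\oplus_{i\le k}T^{(i)}:\oplus_{i\le k}\mathrm{Im}\,Q_{\hbar}^{(i)}\to\oplus_{i\le k}\Polynomial^{(i)}$. By (\ref{eq:ImQ}), both sides coincide as linear subspaces (modulo a Gaussian factor $\varphi_{0}$) with polynomials of degree $\le k$, so they have the common dimension $\binom{D+k}{D}=\sum_{i=0}^{k}\binom{D+i-1}{D-1}$, and it suffices to prove injectivity.

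Second, I would establish injectivity of both restrictions by a short duality argument. For the map $\oplus Q_{\hbar}^{(i)}$: if $p\in\Polynomial_{\le k}$ (viewed as a tempered distribution on $\real^{D}$) satisfies $Q_{\hbar}^{(i)}p=0$ for every $0\le i\le k$, then $p$ annihilates, via the $\mathcal{S}'$--$\mathcal{S}$ pairing, every element of the span $\oplus_{i\le k}\mathrm{Im}\,Q_{\hbar}^{(i)}=\varphi_{0}\cdot\Polynomial_{\le k}$; pairing against $\varphi_{0}\bar{p}$ yields $\int|p(x)|^{2}\varphi_{0}(x)\,dx=0$, hence $p\equiv0$. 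For the map $\oplus T^{(i)}$: if $f=\varphi_{0}q$ with $q\in\Polynomial_{\le k}$ has $T^{(i)}f=0$ for all $i\le k$, then $f$ vanishes to order $k+1$ at the origin; since $\varphi_{0}(0)=1$, $q$ itself vanishes to order $k+1$ at $0$, forcing $q\equiv0$ because $\deg q\le k$. These arguments also give the continuous extension $\mathcal{Q}_{\hbar}^{(k)}:\mathcal{S}'(\real^{2D})\to\mathcal{S}(\real^{2D})$ for free: $Q_{\hbar}^{(k)}$ is a finite rank orthogonal projection onto a space of Schwartz eigenfunctions $L_{\alpha,\hbar}\in\mathcal{S}(\real^{D})$, so $Q_{\hbar}^{(k)}u=\sum_{|\alpha|=k}\langle u,L_{\alpha,\hbar}\rangle\,L_{\alpha,\hbar}/\|L_{\alpha,\hbar}\|^{2}$ extends continuously from $\mathcal{S}'(\real^{D})$ to $\mathcal{S}(\real^{D})$, and composing with the continuous Bargmann transforms yields the stated extension.

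Third, for the uniform bounds on operator norms, I would invoke the scaling described in Subsection \ref{sub:Scaling}. Using the intertwining relations $\Bargmann_{\hbar}=S_{\hbar}\circ\Bargmann_{1}\circ s_{\hbar}^{-1}$ together with the identity $s_{\hbar}^{-1}\circ Q_{\hbar}^{(k)}\circ s_{\hbar}=Q_{1}^{(k)}$ (which follows from the explicit form of the Hermite eigenfunctions under the rescaling $x\mapsto\hbar^{1/2}x$), one obtains $\mathcal{Q}_{\hbar}^{(k)}=S_{\hbar}\circ\mathcal{Q}_{1}^{(k)}\circ S_{\hbar}^{-1}$ and likewise $\calT_{\hbar}^{(k)}=S_{\hbar}\circ\calT_{1}^{(k)}\circ S_{\hbar}^{-1}$. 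Because $W_{\hbar}^{r}=S_{\hbar}W^{r}$ by the very definition (\ref{eq:def_Wh}), the operator $S_{\hbar}$ is an isometric isomorphism $L^{2}(\real^{2D},(W_{1}^{r})^{2})\to L^{2}(\real^{2D},(W_{\hbar}^{r})^{2})$. Hence the operator norms of (\ref{eq:QT_bijection2}), (\ref{eq:TQ_bijection2}) and their inverses at arbitrary $\hbar>0$ are equal to the corresponding norms at $\hbar=1$, and those are finite since at $\hbar=1$ we have a bijection between two finite dimensional vector spaces.

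The main obstacle I anticipate is the third step: verifying cleanly that the finite dimensional subspaces $\mathrm{Im}\,\calT_{\hbar}^{(i)}$ and $\mathrm{Im}\,\mathcal{Q}_{\hbar}^{(i)}$ sit inside the weighted space $L^{2}(\real^{2D},(W_{\hbar}^{r})^{2})$ in a genuinely scale-invariant way, so that transporting the $\hbar=1$ isomorphism by $S_{\hbar}$ really gives a norm bound independent of $\hbar$. The estimates of Lemma \ref{lm:Tn_bdd_L2W} on the kernel of $\calT_{\hbar}^{(i)}$ (and the analogous estimates one can derive for $\mathcal{Q}_{\hbar}^{(i)}$ from the explicit Gaussian--polynomial form of its range) should make this transparent once tracked carefully, but it is the one place where care is needed.
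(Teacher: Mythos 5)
Your proof is correct and follows essentially the same route as the paper: reduce to injectivity by matching finite dimensions, prove injectivity of $\oplus\mathcal{Q}_\hbar^{(i)}$ by pairing against $\varphi_0\bar{p}$ and of $\oplus\calT_\hbar^{(i)}$ by invertibility of $\varphi_0$ as a formal power series, and obtain $\hbar$-independent bounds from the scaling intertwining $\Bargmann_\hbar = S_\hbar\circ\Bargmann_1\circ s_\hbar^{-1}$. You also correctly fix the bracketing of $\mathcal{Q}_\hbar^{(k)}$ to the lift $\Bargmann_\hbar\circ Q_\hbar^{(k)}\circ\Bargmann_\hbar^*$, which is what the paper actually uses.
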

\begin{proof}
Recall that the operator $\mathcal{Q}_{\hbar}^{(k)}$ is the orthogonal
projection to its image (\ref{eq:ImQ}), which is finite dimensional
and contained in $\mathcal{S}(\real^{2D})$. Hence we have the first
claim and well definiteness of the operators (\ref{eq:QT_bijection2})
and (\ref{eq:TQ_bijection2}) is an immediate consequence. 

To prove that (\ref{eq:QT_bijection2}) and (\ref{eq:TQ_bijection2})
are bijective, we have only to show that they are injective, because
the subspaces in the source and target have the same finite dimension.
We prove injectivity of (\ref{eq:QT_bijection2}). Let $u\in\left(\oplus_{i=0}^{k}\mbox{Im}\calT_{\hbar}^{(i)}\right)$.
Such $u$ can be expressed as $u=\Bargmann_{\hbar}p$ with $p$ a
polynomial of degree at most $k$ on $\real^{D}$. Suppose that $\left(\oplus_{i=0}^{k}\mathcal{Q}_{\hbar}^{(i)}\right)u=0$.
Since $\left(\oplus_{i=0}^{k}\mathcal{Q}_{\hbar}^{(i)}\right)$ is
an orthogonal projection operator, $u$ is orthogonal to $\oplus_{i=0}^{k}\mbox{Im}\mathcal{Q}_{\hbar}^{(i)}$
and hence we have 
\[
(p,q\cdot\varphi_{0})_{L^{2}(\real^{D})}=(\Bargmann_{\hbar}p,\Bargmann_{\hbar}(q\cdot\varphi_{0}))_{L^{2}(\real^{2D})}=0
\]
for any polynomial $q$ of order at most $k$ on $\real^{D}$. Setting
$q=p$, we see that $(p,p\cdot\varphi_{0})_{L^{2}(\real^{D})}=(|p|^{2},\varphi_{0})_{L^{2}(\real^{D})}=0$,
showing $p=0$ and $u=0$. We have shown that (\ref{eq:QT_bijection2})
is injective.

To prove injectivity of (\ref{eq:TQ_bijection2}), let $u=q\varphi_{0}\in\left(\oplus_{i=0}^{k}\mbox{Im}\mathcal{Q}_{\hbar}^{(i)}\right)$
with $q$ a polynomial of degree at most $k$ on $\mathbb{R}^{D}$.
Let $p=\left(\oplus_{i=0}^{k}\calT_{\hbar}^{(i)}\right)u$ be the
Taylor expansion of $u$ at $0$ up to order $k$. Suppose that $p=0$.
Since $\varphi_{0}(0)\neq0$, $\varphi_{0}$ is invertible as a formal
power series, we deduce that $q=0$. Hence (\ref{eq:TQ_bijection2})
is injective.

The operators (\ref{eq:QT_bijection2}) (resp. (\ref{eq:TQ_bijection2}))
for different $\hbar>0$ are related by the scaling (\ref{eq:wI_hbar})
and hence we get the last claim. 
\end{proof}

\subsection{\label{sub:The-rough-Laplacian_4.6}The rough Laplacian on $\real^{2d}$}

\label{ss:Laplacian_on_R2d}As in Subsection \ref{sub: Euclidean_prequantum_and_Laplacian_operator},
we consider $\mathbb{R}^{2d}$ as a symplectic linear space with $\omega=\sum_{i=1}^{d}dq^{i}\wedge dp^{i}$
and with the additional compatible Euclidean metric
\[
g=\sum_{i=1}^{d}dq^{i}\otimes dq^{i}+dp^{i}\otimes dp^{i}
\]
We have seen in (\ref{eq:Lap_a}) that these data define the Euclidean
rough Laplacian as the operator
\[
\Delta_{\hbar}=D^{*}D:C^{\infty}(\real^{2d})\to C^{\infty}(\real^{2d}).
\]
This operator $\Delta_{\hbar}$ is a closed self-adjoint operator
on $L^{2}(\real^{2d})$ and its domain of definition is 
\[
\domain(\Delta_{\hbar})=\{u\in L^{2}(\real^{2d})\mid\|\Delta_{\hbar}u\|_{L^{2}}<\infty\}.
\]
 Note that $\domain(\Delta_{\hbar})$ becomes a Hilbert space if we
consider the norm 
\begin{equation}
\|u\|_{\Delta_{\hbar}}=((u,u)_{\Delta_{\hbar}})^{1/2}\label{eq:def_norm_Delta}
\end{equation}
 induced by the inner product 
\[
(u,v)_{\Delta_{\hbar}}=(u,v)_{L^{2}}+(\Delta_{\hbar}u,\Delta_{\hbar}v)_{L^{2}}.
\]
Obviously, $\Delta_{\hbar}$ gives a bounded operator 
\[
\Delta_{\hbar}:\left(\domain(\Delta_{\hbar}),\|\cdot\|_{\Delta_{\hbar}}\right)\to L^{2}(\real^{2d}).
\]

An important property of the operator $\Delta_{\hbar}$ that follows
from the definition is that it is invariant with respect to the action
of prequantum transfer operators for symplectic isometries:
\begin{lem}
Suppose that $f:\real^{2d}\to\real^{2d}$ is an isometric affine map
preserving the symplectic form $\omega$, then we have $\Delta_{\hbar}\circ\prequantumL_{f}=\prequantumL_{f}\circ\Delta_{\hbar}$
for the associated prequantum transfer operator $\mathcal{L}_{f}$
given in (\ref{eq:F_f_affine}).
\end{lem}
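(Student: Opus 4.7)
The plan is to prove the commutation by interpreting $\mathcal{L}_f$ and $\Delta_\hbar$ as the local, trivialized representations of intrinsic objects on the Euclidean prequantum bundle $P=\real^{2d}\times\mathbf{U}(1)$ constructed in Subsection \ref{sub: Euclidean_prequantum_and_Laplacian_operator}, and observing that an isometric affine symplectic $f$ lifts to a map $\tilde{f}:P\to P$ that preserves both the connection $A$ and the equivariant metric $g_P$. Since $\Delta$ is intrinsically built from $A$ and $g_P$ alone, any such $\tilde{f}$ acts by isometric automorphisms of the data defining $\Delta$, forcing the commutation.

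More concretely, the first step is to introduce, in addition to $\hat{F}_N:C_N^\infty(P)\to C_N^\infty(P)$ (the intrinsic avatar of $\mathcal{L}_f$), its companion on one-forms
$$\hat{F}_N^{(1)}:C_N^\infty(P,\Lambda^1)\to C_N^\infty(P,\Lambda^1),\qquad\hat{F}_N^{(1)}\alpha=(\tilde{f}^{-1})^*\alpha.$$
Because $\tilde{f}^*A=A$ by construction of the prequantum lift, $\tilde{f}$ preserves the horizontal distribution $H=\ker A$, and hence the horizontal projection $H$ used in the definition (\ref{eq:def_Du}) of $D$ intertwines with $\tilde{f}^*$. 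This gives the first key identity $D\circ\hat{F}_N=\hat{F}_N^{(1)}\circ D$ on $C_N^\infty(P)$.

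The second step is to show $D^*\circ\hat{F}_N^{(1)}=\hat{F}_N\circ D^*$. For this, note that $f$ being a symplectic isometry means it preserves both $\omega$ and $g$; combined with equivariance, this makes $\tilde{f}$ an isometry of $g_P$, so that $\hat{F}_N$ and $\hat{F}_N^{(1)}$ are unitary on the respective $L^2$ spaces used in the definition of $D^*$. Taking adjoints of the first identity (or working formally on $\mathcal{S}$ and using the density argument) yields the second identity. Composing,
$$\Delta\circ\hat{F}_N=D^*D\hat{F}_N=D^*\hat{F}_N^{(1)}D=\hat{F}_N D^*D=\hat{F}_N\circ\Delta.$$
Passing to the trivialization via $\tau_0$ as in Proposition \ref{prop:expression_of_prequantum_op_on_Rn-1} and Proposition \ref{prop:expression_of_D*_Delta} translates this to $\Delta_\hbar\circ\mathcal{L}_f=\mathcal{L}_f\circ\Delta_\hbar$.

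The one genuine point requiring care, and the only potential obstacle, is the second step: $D^*$ is defined via an $L^2$ pairing on $\real^{2d}$, which is non-compact, so one must either work on a dense domain of Schwartz functions and close by continuity, or give a direct coordinate computation. The latter is in fact short: from the explicit formula (\ref{eq:Lap_a}), $\hbar^2\Delta_\hbar=\sum_j\widehat{\zeta_p^j}^2+\widehat{\zeta_q^j}^2$, and a direct check shows that the six-dimensional family of affine symplectic isometries acts on the operator tuple $(\widehat{\zeta_q^j},\widehat{\zeta_p^j})$ by the same orthogonal/symplectic transformation on the indices together with a translation that is absorbed by the phase factor $e^{-i\mathcal{A}_f/\hbar}$ in (\ref{eq:F_f_affine})—this is essentially the statement that $\mathcal{L}_f$ intertwines $\widehat{\zeta}$ with $\widehat{\zeta}\circ Df$, which preserves the Euclidean sum of squares. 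This direct verification can be used either as the proof itself or as a check of the conceptual argument above.
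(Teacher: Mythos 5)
Your proof is correct. The paper gives no proof of this lemma, presenting it as an immediate consequence of the intrinsic definition $\Delta=D^{*}D$; your conceptual argument --- $\tilde f$ preserves the connection $A$ (so $D$ intertwines with the pull-back on one-forms) and the equivariant metric $g_{P}$ (so the adjoint relation follows on taking $L^{2}$-adjoints and replacing $f$ by $f^{-1}$) --- is exactly that intended reasoning, and the coordinate check via (\ref{eq:Lap_a}) and the intertwining $\mathcal{L}_{f}^{-1}\widehat{\zeta^{j}}\mathcal{L}_{f}=\sum_{k}({}^{t}B^{-1})_{jk}\widehat{\zeta^{k}}$ (which is orthogonal for $B\in U(d)$ and hence preserves $\sum_{j}(\widehat{\zeta^{j}})^{2}$) is a correct direct verification, up to the harmless slip that the group of affine symplectic isometries of $\real^{2d}$ has dimension $d^{2}+2d$ rather than six.
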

From the expression of the Euclidean rough Laplacian obtained in (\ref{eq:Lap_a}),
we have
\[
\hbar\Delta=\mathcal{U}\circ\left(\mathrm{Id}\otimes\frac{1}{\hbar}\left(\widehat{\zeta_{q}}^{2}+\widehat{\zeta_{p}}^{2}\right)\right)\circ\mathcal{U}^{-1}
\]
where $\mathcal{U}$ has been defined in (\ref{eq:def_U_operator}).

By considering a commutative diagram corresponding to (\ref{eq:Long_diagram}),
we obtain the following commutative diagram similar to (\ref{cd:lift_of_Lf}):

\[
\begin{CD}L^{2}\left(\real_{x}^{2d}\right)@>\hbar\Delta_{\hbar}>>L^{2}\left(\real_{x}^{2d}\right)\\
@AA\mathcal{U}A@AA\mathcal{U}A\\
L^{2}\left(\mathbb{R}_{\nu_{q}}^{2}\right)\otimes L^{2}\left(\real_{\zeta_{p}}^{d}\right)@>\mathrm{Id}\otimes\mathscr{H}>>L^{2}\left(\mathbb{R}_{\nu_{q}}^{2}\right)\otimes L^{2}\left(\real_{\zeta_{p}}^{d}\right)
\end{CD}
\]
 where $\mathscr{H}$ is the harmonic oscillator operator defined
in (\ref{eq:def_Harmonic_Oscillator}) with setting $D=d$. 

Thus we may invoke the argument in Subsection \ref{ss:Qhk}, especially
Lemma \ref{lm:spec_Z} and \ref{lm:QTproj}, to derive the next proposition
on the spectral structure of the Euclidean rough Laplacian $\Delta_{\hbar}$.
For $k\ge0$, let us consider the spectral projection operator 
\begin{equation}
\romeq_{\hbar}^{(k)}:=\mathcal{U}\circ\left(\mathrm{Id}\otimes Q_{\hbar}^{\left(k\right)}\right)\circ\mathcal{U}^{-1}\quad:L^{2}\left(\real_{x}^{2d}\right)\rightarrow L^{2}\left(\real_{x}^{2d}\right)\label{eq:def_q}
\end{equation}
 where $Q_{\hbar}^{\left(k\right)}$ is the projection operator on
level $k$ of the harmonic oscillator $\mathscr{H}$. Note that it
restricts to a bounded operator 
\begin{equation}
\romeq_{\hbar}^{(k)}:L^{2}(\real^{2d})\to(\domain(\Delta_{\hbar}),\|\cdot\|_{\Delta_{\hbar}})\subset L^{2}(\real^{2d})\label{eq:q_bdd}
\end{equation}
whose operator norm is bounded by a constant independent of $\hbar$. 
\begin{prop}
\label{pp1-2} The rough Laplacian $\Delta_{\hbar}=D^{*}D$ on the
Euclidean space $\mathbb{R}^{2d}$ is a closed self-adjoint operator
on $L^{2}(\real^{2d})$ and its spectrum consists of integer eigenvalues
$d+2k$ with $k\in\mathbb{N}$. The spectral projector corresponding
to the eigenvalue $d+2k$ is the operator $\romeq_{\hbar}^{(k)}$
given in (\ref{eq:def_q}) and together they form a complete set of
mutually commuting orthogonal projections in $L^{2}(\real^{2d})$.
Consequently we have 
\[
L^{2}(\real^{2d})=\overline{\bigoplus_{k=0}^{\infty}H''_{k}},\qquad\mbox{with \ensuremath{H''_{k}:=\Im\romeq_{\hbar}^{(k)}}.}
\]

\end{prop}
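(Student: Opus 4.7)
The plan is to deduce the entire spectral decomposition of $\Delta_\hbar$ from that of the harmonic oscillator $\mathscr{H}$ (whose spectrum is given by Lemma~\ref{lm:spec_Z}) by exploiting the commutative diagram displayed just before the statement. That diagram identifies, up to the scalar factor $\hbar$, the rough Laplacian on $L^{2}(\mathbb{R}^{2d}_{x})$ with the decoupled operator $\mathrm{Id}\otimes\mathscr{H}$ on the tensor product $L^{2}(\mathbb{R}^{d}_{\nu_{q}})\otimes L^{2}(\mathbb{R}^{d}_{\zeta_{p}})$, via the operator $\mathcal{U}$ defined in (\ref{eq:def_U_operator}). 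Since $\mathcal{U}$ is a unitary isomorphism (as established in Section~\ref{ss:action_hyperbolic_linear}), the whole spectral theory, including self-adjointness, pure-point spectrum, orthogonal spectral projectors and the completeness of the resulting decomposition, transports verbatim from the tensor-product model to $\Delta_\hbar$.

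First I would verify the diagram on the dense subspace $\mathcal{S}(\mathbb{R}^{2d})$ by direct computation: writing $\hat{\zeta}_{p}^{j}$ as multiplication by $\zeta_{p}^{j}$ and $\hat{\zeta}_{q}^{j}$ as $-i\hbar\,\partial_{\zeta_{p}^{j}}$ (so that the commutator $[\hat{\zeta}_{q}^{j},\hat{\zeta}_{p}^{j}]=-i\hbar$ from (\ref{eq:def_zeta_hat-1}) is respected), the operator $\hbar^{-1}\sum_{j}((\hat{\zeta}_{p}^{j})^{2}+(\hat{\zeta}_{q}^{j})^{2})$ acts as $-\hbar\Delta_{\zeta_{p}}+|\zeta_{p}|^{2}/\hbar$, which is a fixed scalar multiple of $\mathscr{H}$. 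Self-adjointness of $\Delta_\hbar=D^{*}D$ on its maximal domain is automatic from its definition, and unitary conjugation by $\mathcal{U}$ matches this self-adjoint extension with the (unique) one of $\mathrm{Id}\otimes\mathscr{H}$, since $\mathcal{U}$ preserves Schwartz functions.

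Once the diagram is in place, I invoke Lemma~\ref{lm:spec_Z} with $D=d$: the harmonic oscillator $\mathscr{H}$ on $L^{2}(\mathbb{R}^{d}_{\zeta_{p}})$ has the pure-point spectrum $\{\tfrac{d}{2}+k\mid k\in\mathbb{N}\}$ with orthogonal spectral projectors $Q_{\hbar}^{(k)}$ of finite rank $\binom{d+k-1}{d-1}$, and these projectors sum (in the strong sense) to the identity. Tensoring with the identity on $L^{2}(\mathbb{R}^{d}_{\nu_{q}})$ preserves the set of eigenvalues while making each eigenspace infinite-dimensional; the spectral projectors are then $\mathrm{Id}\otimes Q_{\hbar}^{(k)}$, which remain mutually commuting orthogonal projections with the completeness property. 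Pulling back by $\mathcal{U}$ produces exactly the operators $q_{\hbar}^{(k)}$ defined in (\ref{eq:def_q}), yields the orthogonal decomposition $L^{2}(\mathbb{R}^{2d})=\overline{\bigoplus_{k\ge 0}H''_{k}}$, and gives the claimed description of the spectrum after the bookkeeping of the scalar linking $\hbar\Delta_{\hbar}$ and $\mathrm{Id}\otimes\mathscr{H}$.

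The argument is essentially mechanical once the diagram is accepted; the only points that require any care are the tracking of the scalar constant that relates the normalization of $\frac{1}{\hbar}(\hat{\zeta}_{q}^{2}+\hat{\zeta}_{p}^{2})$ to $\mathscr{H}$, and ensuring that $\mathcal{U}$ genuinely intertwines the two self-adjoint extensions and not just their restrictions to a common core. Both are routine.
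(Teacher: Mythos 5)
Your argument follows the same route the paper takes: establish the unitary intertwining $\hbar\Delta_{\hbar}=\mathcal{U}\circ(\mathrm{Id}\otimes\mathscr{H}')\circ\mathcal{U}^{-1}$ (with $\mathscr{H}'$ a harmonic oscillator, up to a scalar normalization), and then transport the spectral decomposition given in Lemma~\ref{lm:spec_Z} back through $\mathcal{U}$. The one place you should be a little more careful in wording is the passage "writing $\hat{\zeta}_{p}^{j}$ as multiplication and $\hat{\zeta}_{q}^{j}$ as $-i\hbar\partial_{\zeta_{p}^{j}}$": the operators defined in (\ref{eq:def_zeta_hat-1}) are first-order differential operators on $L^{2}(\mathbb{R}^{2d}_{(q,p)})$, and they only become bare multiplication and differentiation on the $\zeta_{p}$-factor after conjugation by $\mathcal{U}$ (which rests on the Bargmann-transform identities already established for Proposition~\ref{prop:affine_Lf}); you are not computing with $\hat{\zeta}_{q},\hat{\zeta}_{p}$ directly but with their conjugates, and that is the content of the diagram you are invoking, not something to re-derive separately. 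You are otherwise right to flag the scalar constant (a factor $2$ relating $\hbar^{-1}(\hat{\zeta}_{q}^{2}+\hat{\zeta}_{p}^{2})$ to $\mathscr{H}$, which turns $\tfrac{d}{2}+k$ into $d+2k$) and the self-adjointness/core issue as the only points requiring care; both are handled exactly as in the paper.
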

The next proposition is an immediate consequence of Lemma \ref{lm:QTproj}. 
\begin{prop}
\label{pp1-3} The operator $\oplus_{i=0}^{k}\romeq_{\hbar}^{(i)}$
and $\oplus_{i=0}^{k}\romet_{\hbar}^{(i)}$ restricts to the bijections
\[
\oplus_{i=0}^{k}\romeq_{\hbar}^{(i)}:\oplus_{i=0}^{k}\Im\romet_{\hbar}^{(i)}\to\oplus_{i=0}^{k}\Im\romeq_{\hbar}^{(i)}
\]
 and 
\[
\oplus_{i=0}^{k}\romet_{\hbar}^{(i)}:\oplus_{i=0}^{k}\Im\romeq_{\hbar}^{(i)}\to\oplus_{i=0}^{k}\Im\romet_{\hbar}^{(i)}
\]
 respectively. The operator norms of these operators and their inverses
are bounded by some constant independent of $\hbar$.
\end{prop}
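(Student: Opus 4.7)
The plan is to reduce the claim to Lemma \ref{lm:QTproj} via the unitary conjugation by the operator $\mathcal{U}$ introduced in (\ref{eq:def_U_operator}). Since by definition
\[
\romet_{\hbar}^{(i)}=\mathcal{U}\circ\left(\mathrm{Id}_{L^{2}(\real_{\nu_{q}}^{d})}\otimes T^{(i)}\right)\circ\mathcal{U}^{-1},\qquad\romeq_{\hbar}^{(i)}=\mathcal{U}\circ\left(\mathrm{Id}_{L^{2}(\real_{\nu_{q}}^{d})}\otimes Q_{\hbar}^{(i)}\right)\circ\mathcal{U}^{-1},
\]
and $\mathcal{U}:L^{2}(\real_{\nu_{q}}^{d})\otimes L^{2}(\real_{\zeta_{p}}^{d})\to L^{2}(\real_{x}^{2d})$ is unitary, the image of $\romet_{\hbar}^{(i)}$ (resp.\ $\romeq_{\hbar}^{(i)}$) is the $\mathcal{U}$-image of $L^{2}(\real_{\nu_{q}}^{d})\otimes\mathrm{Im}\,T^{(i)}$ (resp.\ of $L^{2}(\real_{\nu_{q}}^{d})\otimes\mathrm{Im}\,Q_{\hbar}^{(i)}$). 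Hence the two candidate domains and codomains in the statement are exactly
\[
\mathcal{U}\Big(L^{2}(\real_{\nu_{q}}^{d})\otimes\bigoplus_{i=0}^{k}\mathrm{Im}\,T^{(i)}\Big)\quad\text{and}\quad\mathcal{U}\Big(L^{2}(\real_{\nu_{q}}^{d})\otimes\bigoplus_{i=0}^{k}\mathrm{Im}\,Q_{\hbar}^{(i)}\Big).
\]

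First I would observe that both statements (bijectivity and the uniform norm bound) are preserved under the unitary conjugation by $\mathcal{U}$. It therefore suffices to prove the corresponding statements for the operators $\mathrm{Id}\otimes\left(\oplus_{i=0}^{k}Q_{\hbar}^{(i)}\right)$ and $\mathrm{Id}\otimes\left(\oplus_{i=0}^{k}T^{(i)}\right)$ acting between the finite-rank-in-the-second-factor subspaces displayed above. On these subspaces the first factor $L^{2}(\real_{\nu_{q}}^{d})$ is only a passive bystander, and the two maps split as $\mathrm{Id}\otimes\left(\oplus_{i=0}^{k}Q_{\hbar}^{(i)}\right)$ and $\mathrm{Id}\otimes\left(\oplus_{i=0}^{k}T^{(i)}\right)$; bijectivity and norm control for the tensor-product operators follow from the same properties for their second-factor parts.

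The second factor is exactly what Lemma \ref{lm:QTproj} handles: it asserts that
\[
\bigoplus_{i=0}^{k}\mathcal{Q}_{\hbar}^{(i)}:\bigoplus_{i=0}^{k}\mathrm{Im}\,\calT_{\hbar}^{(i)}\longrightarrow\bigoplus_{i=0}^{k}\mathrm{Im}\,\mathcal{Q}_{\hbar}^{(i)}\quad\text{and}\quad\bigoplus_{i=0}^{k}\calT_{\hbar}^{(i)}:\bigoplus_{i=0}^{k}\mathrm{Im}\,\mathcal{Q}_{\hbar}^{(i)}\longrightarrow\bigoplus_{i=0}^{k}\mathrm{Im}\,\calT_{\hbar}^{(i)}
\]
are bijections with norms of themselves and their inverses bounded independently of $\hbar$. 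Using the relation between $\mathcal{Q}_{\hbar}^{(i)},\mathcal{T}_{\hbar}^{(i)}$ (lifts by $\mathcal{B}_{\hbar}$) and $Q_{\hbar}^{(i)},T^{(i)}$, plus the fact that $\mathcal{B}_{\hbar}$ is an isometry onto its image (Lemma \ref{lm:Bargmann_is_isometry}), we transfer the conclusion to $Q_{\hbar}^{(i)}$ and $T^{(i)}$ themselves, and then tensoring with $\mathrm{Id}_{L^{2}(\real_{\nu_{q}}^{d})}$ and conjugating by $\mathcal{U}$ gives the desired conclusion for $\romeq_{\hbar}^{(k)}$ and $\romet_{\hbar}^{(k)}$.

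I do not expect a serious obstacle: the only subtle point is to check that passing to the tensor product with $\mathrm{Id}_{L^{2}(\real_{\nu_{q}}^{d})}$ preserves the uniform-in-$\hbar$ bounds on the norms. But this is automatic for tensor products of bounded operators, since $\|\mathrm{Id}\otimes A\|=\|A\|$. Thus the whole argument reduces cleanly to Lemma \ref{lm:QTproj} together with unitarity of $\mathcal{U}$.
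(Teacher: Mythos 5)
Your argument is correct and follows the same route the paper intends: the paper simply states the proposition is ``an immediate consequence of Lemma~\ref{lm:QTproj},'' and you have filled in exactly the mechanism behind that claim — conjugating by the unitary $\mathcal{U}$ to strip off the passive $L^{2}(\real_{\nu_{q}}^{d})$ factor (using $\|\mathrm{Id}\otimes A\|=\|A\|$), and then using the commuting diagram~(\ref{cd:conj_to_poly}) together with the isometry property of the Bargmann transform (Definitions~\ref{Def:H^r_h} and \ref{def:Bargmann-transform}) to pass between the phase-space statements of Lemma~\ref{lm:QTproj} about $\mathcal{Q}_{\hbar}^{(i)},\calT_{\hbar}^{(i)}$ and the corresponding statements about $Q_{\hbar}^{(i)},T^{(i)}$ in the second tensor factor.
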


\subsection{\label{sub:The-multiplication-operators_Laplacian}The multiplication
operators and the rough Laplacian on $\real^{2d}$}

\label{ss:multiplication_Laplacian} Recall the space of functions
$\scrX_{\hbar}$ defined in Setting I, page \ref{Setting-I}.
\begin{lem}
\label{lm:multiplication_Laplacian} For any $\psi\in\scrX_{\hbar}$,
we have 
\[
\left\Vert \left[\multiplication(\psi),\Delta_{\hbar}\right]\right\Vert _{(\domain(\Delta_{\hbar}),\|\cdot\|_{\Delta_{\hbar}})\to L^{2}(\real^{2d})}\le C\hbar^{\theta}
\]
 and 
\[
\left\Vert \left[\multiplication(\psi),\romeq_{\hbar}^{(k)}\right]\right\Vert _{L^{2}(\real^{2d})}\le C\hbar^{\theta}
\]
 where $C$ is a constant independent of $\psi\in\scrX_{\hbar}$ and
$\hbar$. \end{lem}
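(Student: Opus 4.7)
The plan is to compute both commutators explicitly and estimate each term using the derivative bounds (C2) together with the basic energy identity for $\Delta_\hbar$.

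For the first estimate, I would expand $[\mathcal{M}(\psi),\Delta_\hbar]$ using the local expression (\ref{eq:Lap_a}), namely $\Delta_\hbar = \hbar^{-2}\sum_j(\widehat{\zeta_q^j})^2 + (\widehat{\zeta_p^j})^2$. The basic commutator of a multiplication operator with a $\widehat{\zeta}$-operator is simply $[\mathcal{M}(\psi),\widehat{\zeta_\bullet^j}] = i\hbar\,\mathcal{M}(\partial_{\bullet^j}\psi)$, since the $(1/2)p^j$ and $(1/2)q^j$ pieces in (\ref{eq:def_zeta_hat-1}) commute with $\mathcal{M}(\psi)$. Applying the identity $[A,B^2] = 2[A,B]B + [B,[A,B]]$ then gives
\[
[\mathcal{M}(\psi),\Delta_\hbar] = \frac{2i}{\hbar}\sum_j\bigl((\partial_{q^j}\psi)\,\widehat{\zeta_q^j} + (\partial_{p^j}\psi)\,\widehat{\zeta_p^j}\bigr) + \Delta_{\mathrm{Eucl}}\psi,
\]
a sum of a first-order operator in $\widehat{\zeta}$ and a zero-order multiplication operator. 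The key energy estimate comes from the identity $(u,\Delta_\hbar u)_{L^2} = \|Du\|_{L^2}^2 = \hbar^{-2}\sum_j(\|\widehat{\zeta_q^j}u\|^2 + \|\widehat{\zeta_p^j}u\|^2)$, combined with Cauchy--Schwarz and AM-GM, yielding $\|\widehat{\zeta_\bullet^j}u\|_{L^2}\le C\hbar\,\|u\|_{\Delta_\hbar}$. Paired with the bounds $\|\partial^\alpha\psi\|_\infty \le C_\alpha\hbar^{-(1/2-\theta)|\alpha|}$ from (C2), the first-order term is controlled by $C\hbar^{\theta+1/2}\|u\|_{\Delta_\hbar} \le C\hbar^\theta\|u\|_{\Delta_\hbar}$, while the zero-order term must be estimated by trading $L^2$-control against the graph norm --- that is, using the scaling of eigenvalues $(d+2k)/\hbar$ of $\Delta_\hbar$ to convert the potentially large factor $\|\Delta_{\mathrm{Eucl}}\psi\|_\infty \sim \hbar^{2\theta-1}$ into an acceptable $\hbar^\theta$.

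For the second estimate, I would exploit that $q_\hbar^{(k)}$ is the spectral projector of the self-adjoint operator $\Delta_\hbar$ at the isolated eigenvalue $(d+2k)/\hbar$, represented by a Dunford contour integral
\[
q_\hbar^{(k)} = \frac{1}{2\pi i}\oint_{\gamma_k}(z - \Delta_\hbar)^{-1}\,dz
\]
on a small clockwise loop $\gamma_k$ around $(d+2k)/\hbar$ at distance $\sim 1/\hbar$ from the rest of the spectrum. Conjugating by $\mathcal{M}(\psi)$ gives the standard commutator identity
\[
[\mathcal{M}(\psi),q_\hbar^{(k)}] = \frac{1}{2\pi i}\oint_{\gamma_k}(z-\Delta_\hbar)^{-1}\,[\mathcal{M}(\psi),\Delta_\hbar]\,(z-\Delta_\hbar)^{-1}\,dz,
\]
so the $L^2$-norm estimate follows from combining the first estimate with the resolvent bound $\|(z-\Delta_\hbar)^{-1}: L^2 \to (\domain(\Delta_\hbar),\|\cdot\|_{\Delta_\hbar})\| \le C$, uniform in $\hbar$ on $\gamma_k$. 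The contour length times the maximum of the integrand produces the bound $C\hbar^\theta$.

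The main obstacle is the delicate balance in the first estimate: the $1/\hbar$ prefactor in the first-order part and the $\hbar^{2\theta-1}$ growth of $\Delta_{\mathrm{Eucl}}\psi$ are both \emph{a priori} much larger than $\hbar^\theta$, and the $\hbar^\theta$ conclusion emerges only through careful use of the graph-norm inequality coming from $(u,\Delta_\hbar u)_{L^2} = \|Du\|_{L^2}^2$. This energy identity provides exactly the $\hbar$-gain required to convert each differentiation of $\psi$ (which costs $\hbar^{-(1/2-\theta)}$) into a net gain of $\hbar^{1/2+\theta}$.
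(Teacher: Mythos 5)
Your expansion of the commutator and the choice of tools — the energy identity, the derivative bounds (C2), and a Dunford integral for $q_\hbar^{(k)}$ — are sensible, and the second estimate is carried out in a way that differs genuinely from the paper (which instead reruns the Bargmann-lift kernel argument of Lemma~\ref{cor:XT_exchange}, substituting $q_\hbar^{(k)}$ for $t_\hbar^{(k)}$). But the first estimate has a concrete arithmetic error that breaks the argument as written.

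You claim the first-order term $\tfrac{2i}{\hbar}\sum_j\mathcal{M}(\partial_j\psi)\,\widehat{\zeta^j}$ is controlled by $C\hbar^{\theta+1/2}\|u\|_{\Delta_\hbar}$ using $\|\widehat{\zeta^j}u\|\le C\hbar\|u\|_{\Delta_\hbar}$ and $\|\partial\psi\|_\infty\le C\hbar^{\theta-1/2}$. But multiplying these through the $\hbar^{-1}$ prefactor gives
\[
\frac{1}{\hbar}\cdot C\hbar^{\theta-1/2}\cdot C\hbar\,\|u\|_{\Delta_\hbar}\;=\;C\hbar^{\theta-1/2}\|u\|_{\Delta_\hbar},
\]
not $\hbar^{\theta+1/2}$. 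Since $\theta<\beta/8<1/2$, this exponent is \emph{negative}, so the first-order term is not negligible on the nose; the estimate fails. The fix is exactly the ingredient you invoke only for the zero-order piece: exploit the spectral lower bound $\Delta_\hbar\ge d/\hbar$ (Proposition~\ref{pp1-2}), which gives $\|u\|_{L^2}\le C\hbar\|\Delta_\hbar u\|_{L^2}\le C\hbar\|u\|_{\Delta_\hbar}$. Feeding that back into the energy identity improves your estimate to $\|\widehat{\zeta^j}u\|^2\le \hbar^2(u,\Delta_\hbar u)\le\hbar^2\|u\|\,\|\Delta_\hbar u\|\le C\hbar^3\|u\|_{\Delta_\hbar}^2$, i.e.\ $\|\widehat{\zeta^j}u\|\le C\hbar^{3/2}\|u\|_{\Delta_\hbar}$, and then the first-order term lands exactly at $C\hbar^\theta\|u\|_{\Delta_\hbar}$. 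So the conceptual ingredient (trade $L^2$-control for graph-norm control via the $1/\hbar$ spectral gap) is needed for \emph{both} terms, not only the zeroth-order one, and the bound you wrote down without it is wrong. Once that is repaired, the contour-integral derivation of the second estimate goes through with the uniform resolvent bounds you indicate; it is a legitimate and arguably cleaner alternative to the kernel estimates the paper uses, at the cost of appealing to self-adjointness and the spectral gap of $\Delta_\hbar$ rather than to the explicit Schwartz kernel bounds that the rest of Section~6 is built on.
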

\begin{proof}
The first claim can be checked easily from the expression of $\Delta_{\hbar}$
given in Proposition \ref{prop:expression_of_D*_Delta}. For the second
claim, we can just follow the argument in the proof of Lemma \ref{cor:XT_exchange},
replacing $\romet_{\hbar}^{(k)}$ by $\romeq_{\hbar}^{(k)}$. (The
proof is simpler actually.)
\end{proof}

\subsection{\label{sub:Proof-of-Laplacian}Proof of Theorem \ref{thm:band_structure-of_Laplacian} }

In this subsection, we give a proof of Theorem \ref{thm:band_structure-of_Laplacian}
on the rough Laplacian. In former part of the proof, we consider a
rough Laplacian $\tilde{\Delta}_{\hbar}$ constructed from local data
instead of the geometric rough Laplacian $\Delta_{\hbar}$, and prove
the claims of Theorem \ref{thm:band_structure-of_Laplacian} for $\tilde{\Delta}_{\hbar}$.
In the latter part, we show that we can deform the rough Laplacian
$\tilde{\Delta}_{\hbar}$ continuously to $\Delta_{\hbar}$ keeping
the ``band structure'' of the eigenvalues. This will imply that
the cardinality of eigenvalues in the first (or lowest) band coincides
for $\tilde{\Delta}_{\hbar}$ and $\Delta_{\hbar}$. We note at this
moment that, for the argument on rough Laplacian below, we do not
need Condition (2) in Proposition \ref{prop:Local-charts} (i.e. orthogonality
of stable and unstable subspaces) in the choice of the coordinate
charts $\{\kappa_{i}\}_{i=1}^{I_{\hbar}}$ in Proposition \ref{prop:Local-charts},
that is, our argument below holds true for any choice of coordinate
charts $\{\kappa_{i}\}_{i=1}^{I_{\hbar}}$ satisfying the conditions
other than that condition. Also since our proof about the Laplacian
is independent on the dynamics of $f$, in our choices the value of
$0<\beta<1$ can be taken close to $1$.

We introduce a rough Laplacian $\tilde{\Delta}_{\hbar}$ acting on
the space $C_{N}^{\infty}(P)$ of equivariant functions. We start
from the operators on local data. Let 
\[
\boldDelta_{\hbar}:\bigoplus_{i=1}^{I_{\hbar}}\domain(\Delta_{\hbar})\to\bigoplus_{i=1}^{I_{\hbar}}L^{2}(\real^{2d}),\quad\boldDelta((u_{i})_{i=1}^{I_{\hbar}})=(\Delta_{\hbar}u_{i})_{i=1}^{I_{\hbar}}
\]
where $\Delta_{\hbar}$ denotes the Euclidean rough Laplacian on $\real^{2d}$
defined in Subsection \ref{sub:The-rough-Laplacian_4.6}. The next
proposition is an immediate consequence of Lemma \ref{lm:multiplication_Laplacian}.
(So we omit the proof.)
\begin{prop}
\label{pp3} 

There exist constants $C>0$ and $\epsilon>0$, independent of $\hbar$,
such that 
\begin{equation}
\left\Vert \left[{\boldDelta}_{\hbar},(\boldI_{\hbar}\circ\boldI_{\hbar}^{*})\right]\right\Vert _{\bigoplus_{i=1}^{I_{\hbar}}(\domain(\Delta_{\hbar}),\|\cdot\|_{\Delta_{\hbar}})\to\bigoplus_{i=1}^{I_{\hbar}}L^{2}(\real^{2d})}\le C\hbar^{\varepsilon}.\label{eqn:Delta_commutes_with_II}
\end{equation}

\end{prop}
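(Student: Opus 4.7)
The proposition asserts that the locally-defined rough Laplacian $\boldDelta_\hbar$ nearly commutes with the reconstitution-decomposition operator $\boldI_\hbar \circ \boldI_\hbar^*$. Unpacking the definitions, the $(j,i)$-block of $\boldI_\hbar\circ\boldI_\hbar^*$ is nonzero only when $U_i\cap U_j\ne\emptyset$, in which case it takes the form
\[
(\boldI_\hbar\circ\boldI_\hbar^*)_{j,i} = \multiplication(\psi_j)\circ\prequantumL_{\kappa_{j,i}}\circ\multiplication(\chi_\hbar),
\]
where $\kappa_{j,i}=\kappa_j^{-1}\circ\kappa_i$. Expanding the commutator by the Leibniz rule, we obtain three types of contributions: commutators of $\Delta_\hbar$ with $\multiplication(\psi_j)$ and with $\multiplication(\chi_\hbar)$, and a commutator of $\Delta_\hbar$ with the prequantum transfer operator $\prequantumL_{\kappa_{j,i}}$ for the coordinate change. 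The plan is to estimate each type in operator norm from $(\domain(\Delta_\hbar),\|\cdot\|_{\Delta_\hbar})$ to $L^2(\real^{2d})$ and then sum over the $(j,i)$ pairs, invoking the uniform bound (\ref{eq:bound_on_i_sim_j}) on the intersection multiplicity so that summation produces only an absolute constant.

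The commutators with $\multiplication(\psi_j)$ and $\multiplication(\chi_\hbar)$ are handled directly by Lemma \ref{lm:multiplication_Laplacian}, once one checks that the families $\{\psi_j\}_j$ and the rescaled $\chi_\hbar$ satisfy the conditions (C1), (C2) of Setting I; this is exactly the content of Proposition \ref{prop:Local-charts}(6) and the definition of $\chi_\hbar$. The resulting contribution is $O(\hbar^\theta)$.

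The genuine content lies in the commutator $[\Delta_\hbar,\prequantumL_{\kappa_{j,i}}]$. Here I would invoke the decomposition of Proposition \ref{prop:Local-charts}(5): write $\kappa_{j,i}=A_{j,i}^{-1}\circ g_{j,i}$, where $A_{j,i}\in\mathcal{A}$ is a symplectic isometric affine map and $g_{j,i}$ is a symplectic diffeomorphism fixing $0$ with $\|Dg_{j,i}(0)-\mathrm{Id}\|\le C\hbar^{\beta(1/2-\theta)}$ and $\|g_{j,i}\|_{C^s}\le C_s$. The lemma preceding Proposition \ref{pp1-2} (isometric invariance of the Euclidean rough Laplacian under prequantum transfer for symplectic isometries) gives $[\Delta_\hbar,\prequantumL_{A_{j,i}^{-1}}]=0$, reducing everything to estimating $[\Delta_\hbar,\prequantumL_{g_{j,i}}]$ on functions cut off by $\chi_\hbar$. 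Using the explicit expression (\ref{eq:Delta_local_coordinates}) for $\Delta_\hbar$, conjugation $\prequantumL_{g_{j,i}}^{-1}\circ\Delta_\hbar\circ\prequantumL_{g_{j,i}}$ is a second-order differential operator whose coefficients differ from those of $\Delta_\hbar$ by quantities controlled by $\|Dg_{j,i}-\mathrm{Id}\|$ on the support $\mathbb{D}(2\hbar^{1/2-\theta})$; a Taylor expansion and the estimates of Lemma \ref{lm:g} (adapted to the present setting, using (G2)) give a bound of the form $C\hbar^{\beta(1/2-\theta)}$ on the coefficients of the difference. Applied to $u\in(\domain(\Delta_\hbar),\|\cdot\|_{\Delta_\hbar})$ the resulting operator produces an $L^2$-error of order $\hbar^{\beta(1/2-\theta)-(\text{loss})}$; the possible loss comes from the action of $\hbar^{-1}\widehat{\zeta^j}$ factors on wave-packet scale functions and is at most a power of $\hbar^{-\theta}$, still leaving a positive net power $\hbar^\epsilon$ thanks to the choice (\ref{eq:cond_theta}) of $\theta$.

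The main obstacle will be this last step: controlling the commutator $[\Delta_\hbar,\prequantumL_{g_{j,i}}]$ carefully, because $\Delta_\hbar$ is a second-order operator while $g_{j,i}$ is only known to be close to the identity in $C^1$ (with merely bounded higher derivatives). The cutoff by $\chi_\hbar$ restricts attention to a ball of radius $O(\hbar^{1/2-\theta})$, so one can essentially replace $g_{j,i}$ by its Taylor polynomial of modest order and absorb the remainder in the $\hbar^\epsilon$ error; this is analogous in spirit to the argument for Proposition \ref{lm:L_g_Y_almost_identity}, but carried out in the geometry of the rough Laplacian rather than that of the anisotropic Sobolev space. Combining the three contributions and summing over $(j,i)$ with $U_i\cap U_j\ne\emptyset$, using (\ref{eq:bound_on_i_sim_j}), yields the required bound $C\hbar^\epsilon$.
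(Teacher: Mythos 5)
The paper itself gives no proof here: it declares Proposition \ref{pp3} ``an immediate consequence of Lemma \ref{lm:multiplication_Laplacian}'' and moves on. Your proposal is substantially more careful, and rightly so. The $(j,i)$-blocks of $\boldI_{\hbar}\circ\boldI_{\hbar}^{*}$ with $i\neq j$ contain the coordinate-change transfer operator $\prequantumL_{\kappa_{j,i}}$, and the commutator $[\Delta_{\hbar},\prequantumL_{\kappa_{j,i}}]$ is \emph{not} covered by Lemma \ref{lm:multiplication_Laplacian}, which only treats multiplication operators. Your split of the off-diagonal blocks into the three types of commutators, and your use of the decomposition $\kappa_{j,i}=A_{j,i}^{-1}\circ g_{j,i}$ so that the isometric affine part commutes exactly with $\Delta_{\hbar}$ (the Lemma preceding Proposition \ref{pp1-2}) while $g_{j,i}$ is close to the identity, is exactly the missing step and is the correct approach; the Cotlar-Stein style summation over $(j,i)$ using (\ref{eq:bound_on_i_sim_j}) is also standard in this paper and fine.

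One piece of bookkeeping is not quite as you describe it, though the conclusion survives. Conjugating $\Delta_{\hbar}$ by $\prequantumL_{g_{j,i}}$ produces the rough Laplacian of the pulled-back metric, and by (\ref{eq:Delta_local_coordinates}) only its \emph{principal} coefficient $g^{jk}$ differs from $\delta^{jk}$ by $O(\|Dg_{j,i}-\mathrm{Id}\|)=O(\hbar^{\beta(1/2-\theta)})$; the sub-principal coefficient $\partial_{j}g^{jk}$ involves $D^{2}g_{j,i}$, which (G3) only controls to $O(1)$, not to $O(\hbar^{\beta(1/2-\theta)})$. So your phrase ``the coefficients of the difference'' being all controlled by $\|Dg_{j,i}-\mathrm{Id}\|$ is an overclaim. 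Correspondingly, your worry about a ``loss'' of order $\hbar^{-\theta}$ from the $\hbar^{-1}\widehat{\zeta^{k}}$ factors is unfounded and in fact points in the wrong direction: since $\Delta_{\hbar}$ has spectrum bounded below by $d/\hbar$ (Proposition \ref{pp1-2}, read with the $\hbar\Delta_{\hbar}$ normalization), one has $\|u\|\lesssim\hbar\|\Delta_{\hbar}u\|$, hence $\|\widehat{\zeta^{k}}u\|^{2}\leq\hbar^{2}(u,\Delta_{\hbar}u)\lesssim\hbar^{3}\|\Delta_{\hbar}u\|^{2}$, so $\hbar^{-1}\widehat{\zeta^{k}}:(\domain(\Delta_{\hbar}),\|\cdot\|_{\Delta_{\hbar}})\to L^{2}$ has norm $O(\hbar^{1/2})$. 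That $\hbar^{1/2}$ \emph{gain}, not any smallness of the $\partial_{j}g^{jk}$ coefficient, is what makes the sub-principal contribution $O(\hbar^{1/2})$; the principal contribution is $O(\hbar^{\beta(1/2-\theta)})$ for the reason you give. With this correction the argument goes through and yields a better exponent than your estimate suggested.
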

We define a rough Laplacian $\tilde{\Delta}_{\hbar}$ acting on $C_{N}^{\infty}(P)$
by 
\begin{equation}
\tilde{\Delta}_{\hbar}:=\mathbf{I}_{\hbar}^{*}\circ\boldDelta_{\hbar}\circ\mathbf{I}_{\hbar}\qquad:C_{N}^{\infty}\left(P\right)\rightarrow C_{N}^{\infty}\left(P\right).\label{eq:relation_Delta}
\end{equation}

\begin{rem}
Notice that this rough Laplacian operator $\tilde{\Delta}_{\hbar}$
is defined by gluing Euclidean rough Laplacian on local charts and
does \emph{not} coincide with the geometric rough Laplacian $\Delta_{\hbar}=D^{*}D$
with respect to a global metric on $M$ defined in Subsection \ref{sub:The-rough-Laplacian}. 
\end{rem}
The operator $\tilde{\Delta}_{\hbar}$ defined above is a closed densely
defined operator on $C_{N}^{\infty}\left(P\right)$. Its domain of
definition is by definition 
\[
\domain(\tilde{\Delta}_{\hbar})=\{u\in L_{N}^{2}\left(P\right)\,\mid\,\|\tilde{\Delta}_{\hbar}u\|_{L^{2}}<\infty\},
\]
which becomes a Hilbert space if we equip it with the inner product
\[
(u,v)_{\tilde{\Delta}_{\hbar}}=(u,v)_{L^{2}}+(\tilde{\Delta}_{\hbar}u,\tilde{\Delta}_{\hbar}v)_{L^{2}}.
\]
We will write $\|\cdot\|_{\tilde{\Delta}_{\hbar}}$ for the corresponding
norm. It is easy to see the following Lemma. (So we omit the proof.) 
\begin{lem}
The norm $\|\cdot\|_{\tilde{\Delta}_{\hbar}}$ above is equivalent
to the norm 
\[
\|u\|'_{\tilde{\Delta}_{\hbar}}=\left(\sum_{i=1}^{I_{\hbar}}\|u_{i}\|_{\Delta_{\hbar}}^{2}\right)^{1/2}
\]
defined in terms of local data, where $\|\cdot\|_{\Delta_{\hbar}}$
on the right hand side denotes the norm defined in (\ref{eq:def_norm_Delta}).
Consequently we have 
\[
\domain(\tilde{\Delta}_{\hbar})=\left\{ u\in L_{N}^{2}(P)\,\left|\,\sum_{i=1}^{I_{\hbar}}\|u_{i}\|_{\Delta_{\hbar}}^{2}<\infty\right.\right\} .
\]

\end{lem}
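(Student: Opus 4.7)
The plan is to establish the two-sided bound between $\|\cdot\|_{\tilde{\Delta}_{\hbar}}$ and $\|\cdot\|'_{\tilde{\Delta}_{\hbar}}$ using the fact that $\boldI_{\hbar}^{*}\boldI_{\hbar}=\mathrm{Id}$ together with the almost-commutation of $\boldDelta_{\hbar}$ with the projector $\boldI_{\hbar}\circ\boldI_{\hbar}^{*}$ given in Proposition~\ref{pp3}. Throughout, I write $u_i = (\boldI_\hbar u)_i$ for the local data.

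First, I will deal with the $L^2$ part of the norms. Since the bump functions $\{\psi_i\circ\kappa_i^{-1}\}$ form a partition of unity on $M$, since each $\psi_i\le 1$, and since the overlap multiplicity $\#\{j:U_i\cap U_j\neq\emptyset\}$ is uniformly bounded by (\ref{eq:bound_on_i_sim_j}), we have $c\le \sum_i (\psi_i\circ\kappa_i^{-1})^2\le 1$ for some $c>0$ independent of $\hbar$. Since each $\kappa_i$ is symplectic (condition (4) in Proposition~\ref{prop:Local-charts}) and hence volume-preserving, the change-of-variable formula yields
\[
c\,\|u\|_{L^2(P)}^2\;\le\;\sum_{i=1}^{I_\hbar}\|u_i\|_{L^2(\real^{2d})}^2\;\le\;\|u\|_{L^2(P)}^2.
\]
In particular $\|\boldI_\hbar\|$ and $\|\boldI_\hbar^*\|$ as operators between the corresponding $L^2$ spaces are bounded by $\hbar$-independent constants.

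Next, I will prove the upper bound $\|u\|_{\tilde\Delta_\hbar}\le C\|u\|'_{\tilde\Delta_\hbar}$. By definition $\tilde\Delta_\hbar u = \boldI_\hbar^*\boldDelta_\hbar\boldI_\hbar u = \boldI_\hbar^*\,(\Delta_\hbar u_i)_i$, so using boundedness of $\boldI_\hbar^*:\bigoplus L^2\to L^2(P)$ on the derivative term and the preceding paragraph on the $L^2$ term, we immediately get
\[
\|\tilde\Delta_\hbar u\|_{L^2(P)}^2 \le C\sum_i\|\Delta_\hbar u_i\|_{L^2}^2,
\]
hence $\|u\|_{\tilde\Delta_\hbar}\le C\|u\|'_{\tilde\Delta_\hbar}$.

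The main (and only subtle) step is the lower bound $\|u\|'_{\tilde\Delta_\hbar}\le C\|u\|_{\tilde\Delta_\hbar}$. Here I apply $\boldI_\hbar$ to $\tilde\Delta_\hbar u$ and use the relation $\boldI_\hbar^*\boldI_\hbar=\mathrm{Id}$ from (\ref{eq:I*_I_is_Identity}) to write
\[
\boldI_\hbar\tilde\Delta_\hbar u \;=\; (\boldI_\hbar\boldI_\hbar^*)\,\boldDelta_\hbar\,\boldI_\hbar u \;=\; \boldDelta_\hbar\boldI_\hbar u \;+\; [\boldI_\hbar\boldI_\hbar^*,\boldDelta_\hbar]\,\boldI_\hbar u.
\]
By Proposition~\ref{pp3}, the commutator term is bounded in $\bigoplus_i L^2(\real^{2d})$-norm by $C\hbar^{\epsilon}\|\boldI_\hbar u\|_{\bigoplus(\domain(\Delta_\hbar),\|\cdot\|_{\Delta_\hbar})} = C\hbar^\epsilon\|u\|'_{\tilde\Delta_\hbar}$. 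Since $\boldI_\hbar$ is bounded in the $L^2$-norm by the first step, we obtain
\[
\sum_i\|\Delta_\hbar u_i\|_{L^2}^2 \;=\; \|\boldDelta_\hbar\boldI_\hbar u\|_{\bigoplus L^2}^2 \;\le\; C\,\|\tilde\Delta_\hbar u\|_{L^2(P)}^2 \;+\; C\hbar^{2\epsilon}\,\|u\|'^2_{\tilde\Delta_\hbar}.
\]
Combining this with the $L^2$ equivalence gives $\|u\|'^2_{\tilde\Delta_\hbar}\le C\|u\|^2_{\tilde\Delta_\hbar}+C\hbar^{2\epsilon}\|u\|'^2_{\tilde\Delta_\hbar}$; for $\hbar$ sufficiently small the last term is absorbed into the left-hand side, producing $\|u\|'_{\tilde\Delta_\hbar}\le C\|u\|_{\tilde\Delta_\hbar}$. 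The statement about the domain then follows because the two equivalent norms on $C_N^\infty(P)$ produce the same Hilbert-space completion inside $L^2_N(P)$, and elements of this completion are characterized by finiteness of either norm. The hardest point is the proper bookkeeping in this last absorption argument: one must verify that Proposition~\ref{pp3} applies with the boundedness of $\boldI_\hbar$ from $\domain(\tilde\Delta_\hbar)$ into $\bigoplus_i\domain(\Delta_\hbar)$, which is exactly what the argument proves \emph{a posteriori}; one handles this by running the estimate first on $C_N^\infty(P)$ (where both sides are finite) and then passing to the completion.
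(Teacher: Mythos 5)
The paper omits the proof entirely (``It is easy to see the following Lemma. (So we omit the proof.)''), so there is no canonical argument to compare against; your proposal is, however, fully in the spirit of the paper's framework (the lift operators $\boldI_{\hbar},\boldI_{\hbar}^{*}$, the identity $\boldI_{\hbar}^{*}\boldI_{\hbar}=\mathrm{Id}$ from (\ref{eq:I*_I_is_Identity}), and the commutator estimate of Proposition~\ref{pp3}) and it is essentially correct. The $L^{2}$ comparison via the partition of unity and volume preservation of the symplectic charts is right, the upper bound by boundedness of $\boldI_{\hbar}^{*}$ is right, and the lower bound by the algebraic identity
$\boldI_{\hbar}\tilde\Delta_{\hbar}u=\boldDelta_{\hbar}\boldI_{\hbar}u+[\boldI_{\hbar}\boldI_{\hbar}^{*},\boldDelta_{\hbar}]\boldI_{\hbar}u$
followed by absorption of the $O(\hbar^{\epsilon})$ error is a clean argument.

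Two small caveats are worth flagging. First, the absorption step requires $C\hbar^{2\epsilon}<1/2$, so the uniform two-sided bound you derive is valid only for $\hbar$ small (equivalently $N$ large). This is harmless given how the lemma is used in the proof of Theorem~\ref{thm:band_structure-of_Laplacian}, but it should be said. Second, for the ``Consequently, $\domain(\tilde\Delta_{\hbar})=\{u\,:\,\sum_{i}\|u_{i}\|_{\Delta_{\hbar}}^{2}<\infty\}$'' part, your norm equivalence on $C_{N}^{\infty}(P)$ and the completion argument only give the inclusion $\{u\,:\,\sum_{i}\|u_{i}\|_{\Delta_{\hbar}}^{2}<\infty\}\subset\domain(\tilde\Delta_{\hbar})$ together with equivalence of the graph norms on that set. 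For the reverse inclusion you would need either that $C_{N}^{\infty}(P)$ is a core for $\tilde\Delta_{\hbar}$ (so that the maximal domain coincides with the graph-norm completion of smooth sections), or a direct elliptic-regularity argument: $u\in L^{2}$ and $\tilde\Delta_{\hbar}u\in L^{2}$ gives $u\in H^{2}_{\mathrm{loc}}$, hence $\Delta_{\hbar}u_{i}\in L^{2}$ for each compactly supported cutoff $u_{i}=\psi_{i}\cdot(u\circ\tau_{i}\circ\kappa_{i})$. Otherwise the last sentence of your proof, invoking the completion on both sides, is circular where it needs to establish that the \emph{maximal} domain is no larger. This is a routine fix, and consistent with the paper dismissing the lemma as easy, but it is the one genuine place where your writeup skips a required step.
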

Below we are going to construct the spectral projectors for $\tilde{\Delta}_{\hbar}$,
corresponding to the ``bands of eigenvalues''. Again we start from
local data: We consider the operators 
\[
{\boldq}_{\hbar}^{(k)}:\bigoplus_{i=1}^{I_{\hbar}}L^{2}(\real^{2d})\to\bigoplus_{i=1}^{I_{\hbar}}(\domain(\Delta_{\hbar}),\|\cdot\|_{\Delta_{\hbar}})\subset\bigoplus_{i=1}^{I_{\hbar}}L^{2}(\real^{2d}),\qquad{\boldq}_{\hbar}^{(k)}((u_{i})_{i=1}^{I_{\hbar}})=(\romeq_{\hbar}^{(k)}(u_{i}))_{i=1}^{I_{\hbar}},
\]
for $0\le k\le n$, where $q_{\hbar}^{(k)}$ is the operator defined
in (\ref{eq:def_q}). Recall (\ref{eq:q_bdd}) for boundedness of
these projection operators. The remainder is denoted as 
\[
\tilde{\boldq}_{\hbar}={\boldq}_{\hbar}^{(n+1)}:\bigoplus_{i=1}^{I_{\hbar}}L^{2}(\real^{2d})\to\bigoplus_{i=1}^{I_{\hbar}}L^{2}(\real^{2d}),\qquad\tilde{\boldq}_{\hbar}={\boldq}_{\hbar}^{(n+1)}:=\mathrm{Id}-({\boldq}_{\hbar}^{(1)}+{\boldq}_{\hbar}^{(2)}+\cdots+{\boldq}_{\hbar}^{(n)}).
\]
The last operator restricts to a bounded operator 
\[
\tilde{\boldq}_{\hbar}={\boldq}_{\hbar}^{(n+1)}:\bigoplus_{i=1}^{I_{\hbar}}(\domain(\Delta_{\hbar}),\|\cdot\|_{\Delta_{\hbar}})\to\bigoplus_{i=1}^{I_{\hbar}}(\domain(\Delta_{\hbar}),\|\cdot\|_{\Delta_{\hbar}}).
\]

We next introduce the operators 
\[
\checklambda_{\hbar}^{(k)}:=\boldI_{\hbar}^{*}\circ\boldq_{\hbar}^{(k)}\circ\boldI_{\hbar}:L_{N}^{2}(P)\to(\domain(\tilde{\Delta}_{\hbar}),\|\cdot\|_{\tilde{\Delta}_{\hbar}})
\]
for $0\le k\le n$. These are bounded operators and the operator norms
are bounded by a constant independent of $\hbar$. For $k=n+1$, we
set 
\begin{equation}
\checklambda_{\hbar}^{(n+1)}:=\boldI_{\hbar}^{*}\circ\boldq_{\hbar}^{(n+1)}\circ\boldI_{\hbar}=\mathrm{Id}-(\checklambda_{\hbar}^{(0)}+\checklambda_{\hbar}^{(1)}+\cdots+\checklambda_{\hbar}^{(n)}).\label{eq:cLn1}
\end{equation}
  Further we can prove the estimates
\begin{equation}
\left\Vert \checklambda_{\hbar}^{(k)}\circ\checklambda_{\hbar}^{(k)}-\checklambda_{\hbar}^{(k)}\right\Vert _{L_{N}^{2}(P)\to(\domain(\tilde{\Delta}_{\hbar}),\|\cdot\|_{\tilde{\Delta}_{\hbar}})}\le C\hbar^{\epsilon}\quad\mbox{for \ensuremath{0\le k\le n+1}}\label{eq:cLcT2-1}
\end{equation}
and 
\begin{align}
 &  & \left\Vert \checklambda_{\hbar}^{(k)}\circ\checklambda_{\hbar}^{(k')}\right\Vert _{L_{N}^{2}(P)\to(\domain(\tilde{\Delta}_{\hbar}),\|\cdot\|_{\tilde{\Delta}_{\hbar}})}\le C\hbar^{\epsilon}\quad\mbox{for \ensuremath{0\le k,k'\le n+1}with \ensuremath{k\neq k'}}\label{eq:cLcT2}
\end{align}
 for some constants $\epsilon>0$ and $C>0$. (For the case where
either of $k$ or $k'$ equals $n+1$, use the definition (\ref{eq:cLn1})
to check (\ref{eq:cLcT2-1}) and (\ref{eq:cLcT2}).)

Now we proceed in parallel to the argument in Subsection \ref{sub:Proofs-of-Theorem_more_details}
and obtain the following lemma.
\begin{lem}
\label{lm:lambda_Delta} There exist a decomposition of the Hilbert
space $(\domain(\tilde{\Delta}_{\hbar}),\|\cdot\|_{\tilde{\Delta}_{\hbar}})$
\[
\domain(\tilde{\Delta}_{\hbar})=\mathfrak{H}_{0}\oplus\mathfrak{H}_{1}\oplus\mathfrak{H}_{2}\oplus\cdots\oplus\mathfrak{H}_{n}\oplus\widetilde{\mathfrak{H}}
\]
and that of $L_{N}^{2}(P)$ 
\[
L_{N}^{2}(P)=\mathfrak{H}_{0}\oplus\mathfrak{H}_{1}\oplus\mathfrak{H}_{2}\oplus\cdots\oplus\mathfrak{H}_{n}\oplus\overline{\mathfrak{H}}
\]
where $\overline{\mathfrak{H}}$ is the closure of $\widetilde{\mathfrak{H}}$
in $L_{N}^{2}(P)$. The subspaces $\mathfrak{H}_{k}$ for $0\le k\le n$
are of finite dimension. If we write $\lambda_{\hbar}^{(k)}$ for
projection operators to $\mathfrak{H}_{k}$ (resp. to $\overline{\mathfrak{H}}$
in the case k=n+1) along other subspaces, then we have 
\begin{enumerate}
\item $ $$\left\Vert \lambda_{\hbar}^{(k)}-\checklambda_{\hbar}^{(k)}\right\Vert _{(\domain(\tilde{\Delta}_{\hbar}),\|\cdot\|_{\tilde{\Delta}_{\hbar}})}\le C\hbar^{\epsilon}$
for $0\le k\le n$,
\item if $k\neq k'$, 
\[
\|\lambda_{\hbar}^{(k)}\circ\hbar\tilde{\Delta}_{\hbar}\circ\lambda_{\hbar}^{(k')}\|_{(\domain(\Delta_{\hbar}),\|\cdot\|_{\Delta_{\hbar}})\to L_{N}^{2}(P)}\le C\hbar^{\epsilon},
\]

\item if $0\le k\le n$, 
\[
\left\Vert \lambda_{\hbar}^{(k)}\circ\hbar\tilde{\Delta}_{\hbar}\circ\lambda_{\hbar}^{(k)}-\left(d+2k\right)\cdot\lambda_{\hbar}^{(k)}\right\Vert _{(\domain(\Delta_{\hbar}),\|\cdot\|_{\Delta_{\hbar}})\to L_{N}^{2}(P)}\le C\hbar^{\epsilon},
\]

\item for $k=n+1$, we have 
\[
\|\lambda_{\hbar}^{(n+1)}\circ\hbar\tilde{\Delta}_{\hbar}(u)\|_{L^{2}}\ge\left(d+2k+1-C\hbar^{\epsilon}\right)\|u\|_{L^{2}}\quad\mbox{for \ensuremath{u\in\widetilde{\mathfrak{H}}=\mathfrak{H}^{(n+1)}}}.
\]

\end{enumerate}
\end{lem}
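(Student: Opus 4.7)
The plan is to mirror the construction of the projectors $\tau_{\hbar}^{(k)}$ from $\check{\tau}_{\hbar}^{(k)}$ carried out in Subsection \ref{sub:Proofs-of-Theorem_more_details}, replacing the spectral projectors $\boldt_{\hbar}^{(k)}$ of the transfer operator model by the spectral projectors $\boldq_{\hbar}^{(k)}$ of the harmonic oscillator (Proposition \ref{pp1-2}) and replacing Proposition \ref{pp1} by Proposition \ref{pp3}. The key new feature is that we must work in the graph-norm Hilbert space $(\domain(\tilde{\Delta}_{\hbar}),\|\cdot\|_{\tilde{\Delta}_{\hbar}})$, so estimates on the commutator of $\boldDelta_{\hbar}$ with $\boldI_{\hbar}\circ\boldI_{\hbar}^{*}$ (Proposition \ref{pp3}) must be used throughout in place of mere boundedness.

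First, from (\ref{eq:cLcT2-1}) and (\ref{eq:cLcT2}) the operators $\check{\lambda}_{\hbar}^{(k)}$, $0\le k\le n+1$, are mutually almost-orthogonal almost-projections whose sum equals the identity. Exactly as in (\ref{eq:twodisks}), the spectrum of $\check{\lambda}_{\hbar}^{(k)}$ is contained in $\mathbb{D}(0,C\hbar^{\epsilon})\cup\mathbb{D}(1,C\hbar^{\epsilon})$, so a Dunford integral around the unit circle produces a finite-rank (for $k\le n$) genuine spectral projector $\hat{\lambda}_{\hbar}^{(k)}$, and the estimates of Lemma \ref{lm:pi_P} carry over verbatim (with $\mathcal{H}_{\hbar}^{r}(P)$ replaced by $(\domain(\tilde{\Delta}_{\hbar}),\|\cdot\|_{\tilde{\Delta}_{\hbar}})$, using (\ref{eq:q_bdd}) to keep all the integrals within the graph-norm space). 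Then (\ref{eq:sum_of_hat_tau}) is valid in the $\tilde{\Delta}_{\hbar}$-norm and $\hat{\lambda}_{\hbar}^{(0)}+\cdots+\hat{\lambda}_{\hbar}^{(n+1)}$ is invertible; I set $\lambda_{\hbar}^{(k)}=\hat{\lambda}_{\hbar}^{(k)}\bigl(\sum_{j}\hat{\lambda}_{\hbar}^{(j)}\bigr)^{-1}$. Setting $\mathfrak{H}_{k}:=\Im\lambda_{\hbar}^{(k)}$ gives the decomposition, and the closure $\overline{\mathfrak{H}}$ of $\widetilde{\mathfrak{H}}:=\mathfrak{H}_{n+1}$ in $L^{2}_{N}(P)$ then furnishes the analogous decomposition of $L^{2}_{N}(P)$. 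Claim (1) is immediate.

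For the spectral claims (2)--(4) I use the fundamental identity $\hbar\tilde{\Delta}_{\hbar}\circ\boldI_{\hbar}^{*}\circ\boldq_{\hbar}^{(k)}=\boldI_{\hbar}^{*}\circ(\hbar\boldDelta_{\hbar})\circ\boldq_{\hbar}^{(k)}=(d+2k)\cdot\boldI_{\hbar}^{*}\circ\boldq_{\hbar}^{(k)}$ for $0\le k\le n$, which follows from Proposition \ref{pp1-2} (where $\boldq_{\hbar}^{(k)}$ is the spectral projector of the Euclidean rough Laplacian for the eigenvalue $d+2k$). Combined with (\ref{eqn:Delta_commutes_with_II}) this gives
\[
\hbar\tilde{\Delta}_{\hbar}\circ\check{\lambda}_{\hbar}^{(k)}=(d+2k)\cdot\check{\lambda}_{\hbar}^{(k)}+\mathcal{O}(\hbar^{\epsilon})\qquad(0\le k\le n),
\]
from which (2) and (3) follow by replacing $\check{\lambda}_{\hbar}^{(k)}$ by $\lambda_{\hbar}^{(k)}$ (using (1)). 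For claim (4), Proposition \ref{pp1-2} and Lemma \ref{lm:spec_Z} give $\|\boldq_{\hbar}^{(n+1)}(\hbar\boldDelta_{\hbar})(u)\|_{L^{2}}\ge(d+2(n+1))\|\boldq_{\hbar}^{(n+1)}u\|_{L^{2}}$; transporting this lower bound via $\mathbf{I}_{\hbar}^{*}$ with the help of (\ref{eqn:Delta_commutes_with_II}) and the approximation $\lambda_{\hbar}^{(n+1)}\approx\check{\lambda}_{\hbar}^{(n+1)}$ yields the required estimate on $\widetilde{\mathfrak{H}}$.

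The main obstacles I anticipate are twofold. First, because $\tilde{\Delta}_{\hbar}$ is unbounded, every Dunford integral and every norm estimate must be interpreted in the correct pair of Hilbert spaces ($L^{2}_{N}(P)$ versus $(\domain(\tilde{\Delta}_{\hbar}),\|\cdot\|_{\tilde{\Delta}_{\hbar}})$), and the boundedness (\ref{eq:q_bdd}) of $\boldq_{\hbar}^{(k)}$ from $L^{2}$ into the graph-norm space is essential to close the loop. Second, obtaining sharp constants in (3) and (4) requires that the commutator estimate (\ref{eqn:Delta_commutes_with_II}) be understood as a bound from $(\domain(\Delta_{\hbar}),\|\cdot\|_{\Delta_{\hbar}})$ to $L^{2}$, and when composed with $\check{\lambda}_{\hbar}^{(k)}$ (which maps $L^{2}$ into the graph-norm space) produces an $\mathcal{O}(\hbar^{\epsilon})$ error in operator norm on the proper spaces; this juggling between the two norms is the delicate bookkeeping step, but is purely technical once the framework is set up.
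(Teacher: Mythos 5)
Your plan is correct and follows essentially the same approach as the paper itself, which offers no explicit proof but simply says ``proceed in parallel to the argument in Subsection \ref{sub:Proofs-of-Theorem_more_details},'' i.e.\ reproduce the $\check\tau\rightsquigarrow\hat\tau\rightsquigarrow\tau$ construction with $\boldt_{\hbar}^{(k)}$ replaced by $\boldq_{\hbar}^{(k)}$ and Proposition \ref{pp1} replaced by Proposition \ref{pp3}, keeping track of the $L^{2}$ versus graph-norm distinction. One small imprecision in your write-up: the first equality in your ``fundamental identity'' $\hbar\tilde{\Delta}_{\hbar}\circ\boldI_{\hbar}^{*}\circ\boldq_{\hbar}^{(k)}=\boldI_{\hbar}^{*}\circ(\hbar\boldDelta_{\hbar})\circ\boldq_{\hbar}^{(k)}$ is \emph{not} exact and does not follow from Proposition \ref{pp1-2} alone; since $\tilde{\Delta}_{\hbar}\circ\boldI_{\hbar}^{*}=\boldI_{\hbar}^{*}\circ\boldDelta_{\hbar}\circ(\boldI_{\hbar}\circ\boldI_{\hbar}^{*})$, it already carries the $\mathcal{O}(\hbar^{\epsilon})$ commutator error of (\ref{eqn:Delta_commutes_with_II}). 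Only the second equality is exact. You do in fact invoke (\ref{eqn:Delta_commutes_with_II}) in the next sentence, so the downstream conclusion $\hbar\tilde{\Delta}_{\hbar}\circ\check{\lambda}_{\hbar}^{(k)}=(d+2k)\check{\lambda}_{\hbar}^{(k)}+\mathcal{O}(\hbar^{\epsilon})$ is right, but the intermediate chain should be presented as approximate from the start.
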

Therefore, by the general theorem on perturbation of closed linear
operators \cite[chap.IV, th. 1.16]{kato_book}, we obtain an analogue
of Theorem \ref{thm:band_structure-of_Laplacian} for the rough Laplacian
$\tilde{\Delta}_{\hbar}$.
\begin{thm}
\label{thm:Bund_spectrum_for_tildeDelta}There exists a small constant
$\epsilon>0$ such that, for any $\alpha>0$, we have 
\[
dist\left(Spec\left(\hbar\tilde{\Delta}_{\hbar}\right)\cap\left\{ \left|z\right|<\alpha\right\} ,\{d+2k,\; k\in\mathbb{N}\}\right)\le\hbar^{\epsilon}
\]

\end{thm}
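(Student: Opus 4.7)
The plan is to deduce Theorem \ref{thm:Bund_spectrum_for_tildeDelta} directly from Lemma \ref{lm:lambda_Delta} by a standard spectral perturbation argument, treating $\hbar\tilde{\Delta}_{\hbar}$ as a small perturbation of an explicit block-diagonal reference operator. First I would fix $\alpha>0$ and choose $n$ large enough (depending on $\alpha$ but not on $\hbar$) so that $d+2(n+1)+1>\alpha+1$. By items (3) and (4) of Lemma \ref{lm:lambda_Delta}, this choice of $n$ ensures that on the remainder block $\mathfrak{H}^{(n+1)}=\overline{\mathfrak{H}}$ the operator $\hbar\tilde{\Delta}_{\hbar}$ is bounded below by a constant strictly greater than $\alpha$, so eigenvalues in $\{|z|<\alpha\}$ can only arise from the finite-dimensional blocks $\mathfrak{H}_k$ with $0\le k\le n$.

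Next I would define the reference operator
\[
A_{\hbar}:=\sum_{k=0}^{n}(d+2k)\,\lambda_{\hbar}^{(k)}+\hbar\tilde{\Delta}_{\hbar}\circ\lambda_{\hbar}^{(n+1)}
\]
acting on $\mathcal{D}(\tilde{\Delta}_{\hbar})$. By items (2) and (3) of Lemma \ref{lm:lambda_Delta}, we have the operator norm estimate $\|\hbar\tilde{\Delta}_{\hbar}-A_{\hbar}\|_{(\mathcal{D}(\tilde{\Delta}_{\hbar}),\|\cdot\|_{\tilde{\Delta}_{\hbar}})\to L_{N}^{2}(P)}\le C\hbar^{\epsilon}$, because the only cross-terms $\lambda_{\hbar}^{(k)}\circ\hbar\tilde{\Delta}_{\hbar}\circ\lambda_{\hbar}^{(\ell)}$ with $k\neq\ell$ are of this size, and on each diagonal block with $k\le n$ the discrepancy is $\hbar^{\epsilon}$-small.

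The core step is then to establish a uniform resolvent bound for $A_{\hbar}$ in the region
\[
\Omega_{\hbar}:=\bigl\{z\in\mathbb{C}\,\big|\,|z|<\alpha,\ \mathrm{dist}(z,\{d+2k\mid k\in\mathbb{N}\})>\hbar^{\epsilon/2}\bigr\}.
\]
On each finite-dimensional block $\mathfrak{H}_{k}$ ($0\le k\le n$), the reference operator acts as $(d+2k)\cdot\mathrm{Id}$, so $(z-A_{\hbar})^{-1}\lambda_{\hbar}^{(k)}=(z-(d+2k))^{-1}\lambda_{\hbar}^{(k)}$ with norm $\le C\hbar^{-\epsilon/2}$ for $z\in\Omega_{\hbar}$. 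On $\mathfrak{H}^{(n+1)}$, the lower bound in item (4) combined with self-adjointness of $\Delta_{\hbar}$ on local charts (hence of $\tilde{\Delta}_{\hbar}$ on $\mathfrak{H}^{(n+1)}$ up to $O(\hbar^{\epsilon})$) gives $\|(z-A_{\hbar})^{-1}\lambda_{\hbar}^{(n+1)}\|\le C$ uniformly for $z\in\Omega_{\hbar}$. Adding the blocks and using item (1) to bound $\|\lambda_{\hbar}^{(k)}\|$ by a constant, we obtain $\|(z-A_{\hbar})^{-1}\|\le C\hbar^{-\epsilon/2}$.

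Finally, a standard Neumann-series argument gives invertibility of $z-\hbar\tilde{\Delta}_{\hbar}$ on $\Omega_{\hbar}$ provided $\hbar^{\epsilon/2}\cdot C\hbar^{-\epsilon/2}<1/2$, which holds for $\hbar$ small enough once we shrink $\epsilon$ appropriately (replacing $\epsilon$ by $\epsilon/3$, say). This shows that $\mathrm{Spec}(\hbar\tilde{\Delta}_{\hbar})\cap\{|z|<\alpha\}$ is contained in the $\hbar^{\epsilon/3}$-neighborhood of $\{d+2k:k\in\mathbb{N}\}$, which is the claim. The main obstacle I anticipate is the careful handling of the non-orthogonal, approximately-projecting nature of the $\lambda_{\hbar}^{(k)}$: one must work with the graph norm $\|\cdot\|_{\tilde{\Delta}_{\hbar}}$ throughout and verify that the perturbation $\hbar\tilde{\Delta}_{\hbar}-A_{\hbar}$ is relatively bounded with small bound in the sense of Kato's perturbation theory, rather than naively decomposing in an orthogonal basis which does not exist here.
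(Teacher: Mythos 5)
Your proposal follows the same route as the paper: the paper's proof is a one-line appeal to Kato's theorem on perturbation of closed operators \cite[chap.IV, th.\ 1.16]{kato_book}, and your Neumann-series argument is exactly the mechanism that theorem encodes; the block-diagonal reference operator and the use of items (2)--(4) of Lemma \ref{lm:lambda_Delta} is the intended reading.

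One imprecision worth flagging: in the step establishing the resolvent bound on the remainder block $\mathfrak{H}^{(n+1)}$ you appeal to self-adjointness of $\tilde{\Delta}_{\hbar}$ (``up to $O(\hbar^{\epsilon})$''), but this is not available. The operator $\tilde{\Delta}_{\hbar}=\boldI_{\hbar}^{*}\circ\boldDelta_{\hbar}\circ\boldI_{\hbar}$ is built by gluing Euclidean Laplacians through $\boldI_{\hbar}$ and $\boldI_{\hbar}^{*}$, and $\boldI_{\hbar}^{*}$ is \emph{not} the $L^{2}$-adjoint of $\boldI_{\hbar}$ (it is only a left inverse); the paper explicitly notes that $\tilde{\Delta}_{\hbar}$ differs from the geometric $\Delta_{\hbar}=D^{*}D$, and no self-adjointness for $\tilde{\Delta}_{\hbar}$ is established or needed. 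You should instead derive the resolvent bound on $\mathfrak{H}^{(n+1)}$ directly from the quantitative lower bound in Lemma \ref{lm:lambda_Delta}(4): for $|z|<\alpha$ and $u\in\widetilde{\mathfrak{H}}$,
\[
\|(z-\lambda_{\hbar}^{(n+1)}\circ\hbar\tilde{\Delta}_{\hbar}\circ\lambda_{\hbar}^{(n+1)})u\|_{L^{2}}\;\ge\;\bigl(d+2n+3-C\hbar^{\epsilon}-\alpha\bigr)\|u\|_{L^{2}},
\]
which, after choosing $n$ so that the constant is positive, gives injectivity with closed range; surjectivity (hence the resolvent bound) then follows by a Fredholm/index argument or by verifying the analogous lower bound for the transposed operator. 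This is precisely the kind of completion Kato's theorem supplies, so the gap is one of exposition rather than strategy, but the self-adjointness claim as written would not survive scrutiny.
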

when $\hbar$ is sufficiently small.

Further we have 
\begin{lem}
\label{lem:Relation_between_ranks}There exists $C_{0}>0$ s.t. for
sufficiently small $\hbar>0$, we have
\[
C_{0}^{-1}\hbar^{-d}\leq\dim\mathcal{H}_{k}=\dim\mathfrak{H_{k}}\leq C_{0}\hbar^{-d}\quad\mbox{for }0\le k\le n.
\]
\end{lem}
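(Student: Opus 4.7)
The plan is to prove the dimensional identity $\dim \mathcal{H}_k = \dim \mathfrak{H}_k$ first, then deduce the two-sided bound by $\hbar^{-d}$ directly from Theorem \ref{thm:More-detailled-description_of_spectrum}(1), which already gives $\dim \mathcal{H}_k = r(k,d)\, N^d \mathrm{Vol}_\omega(M) + O(N^{d-\epsilon})$ with $N = 1/(2\pi\hbar)$. So essentially everything reduces to matching ranks of the two spectral projectors built from the localized Taylor-type projectors $\boldt_\hbar^{(k)}$ and the localized harmonic-oscillator projectors $\boldq_\hbar^{(k)}$.

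To equate the dimensions I would work with the cumulative projections. Set $T_{\leq k} := \sum_{i=0}^k \tau_\hbar^{(i)}$ and $L_{\leq k} := \sum_{i=0}^k \lambda_\hbar^{(i)}$. These are true projections whose ranks are $\sum_{i=0}^k \dim \mathcal{H}_i$ and $\sum_{i=0}^k \dim \mathfrak{H}_i$ respectively; once I prove $\mathrm{rank}(T_{\leq k}) = \mathrm{rank}(L_{\leq k})$ for every $0 \leq k \leq n$, subtracting consecutive equalities yields $\dim \mathcal{H}_k = \dim \mathfrak{H}_k$. By the construction in Subsection \ref{sub:Proofs-of-Theorem_more_details} and its analogue in Subsection \ref{sub:Proof-of-Laplacian}, the projector $T_{\leq k}$ differs by $O(\hbar^\epsilon)$ in operator norm from $\check T_{\leq k} := \boldI_\hbar^* \circ \boldt_\hbar^{(\leq k)} \circ \boldI_\hbar$, and likewise $L_{\leq k}$ differs by $O(\hbar^\epsilon)$ from $\check L_{\leq k} := \boldI_\hbar^* \circ \boldq_\hbar^{(\leq k)} \circ \boldI_\hbar$, where $\boldt_\hbar^{(\leq k)} := \sum_{i=0}^k \boldt_\hbar^{(i)}$ and similarly for $\boldq_\hbar^{(\leq k)}$. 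Since two nearly-idempotent operators close in operator norm to a given projection have equal rank (a standard consequence of the Riesz spectral projector construction used already in Lemma \ref{lm:pi_P}), it suffices to show $\mathrm{rank}(\check T_{\leq k}) = \mathrm{rank}(\check L_{\leq k})$.

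For the latter, the crucial tool is Proposition \ref{pp1-3}: the operator $\bigoplus_i q_\hbar^{(i)}$ furnishes a bijection $\Im \boldt_\hbar^{(\leq k)} \to \Im \boldq_\hbar^{(\leq k)}$ with bounded inverse $\bigoplus_i t_\hbar^{(i)}$, uniformly in $\hbar$. Lifting to the global setting, I would conjugate by $\boldI_\hbar$ and $\boldI_\hbar^*$ and exploit $\boldI_\hbar^* \circ \boldI_\hbar = \mathrm{Id}$ together with the almost-commutation estimate (\ref{eqn:com}) in Proposition \ref{pp1} and its analogue for $\boldq_\hbar^{(\leq k)}$ (which follows by the same proof using the commutator estimate in Lemma \ref{lm:multiplication_Laplacian}). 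This produces bounded operators between the images of $\check T_{\leq k}$ and $\check L_{\leq k}$ that are mutually inverse modulo $O(\hbar^\epsilon)$, hence true bounded bijections for sufficiently small $\hbar$, forcing equality of the ranks.

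The main obstacle will be the last step, namely verifying that the local bijection of Proposition \ref{pp1-3} descends to a genuine bijection after insertion of the global projection $\boldI_\hbar \circ \boldI_\hbar^*$. This requires establishing the commutator bound $\|[\boldq_\hbar^{(\leq k)}, \boldI_\hbar \circ \boldI_\hbar^*]\| \leq C\hbar^\epsilon$ in a form strong enough that composing the two candidate bijections yields $\mathrm{Id} + O(\hbar^\epsilon)$ on $\Im \check T_{\leq k}$ (and symmetrically on $\Im \check L_{\leq k}$). The estimate for $\boldq$ is not stated in the paper, but its proof should be parallel to that of Proposition \ref{pp1}, replacing the bound of Corollary \ref{cor:XT_exchange} for $\romet_\hbar^{(k)}$ by the commutator estimate $\|[\multiplication(\psi), \romeq_\hbar^{(k)}]\| \leq C\hbar^\theta$ of Lemma \ref{lm:multiplication_Laplacian}. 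Once this is in hand, the quantitative bound $C_0^{-1}\hbar^{-d} \leq \dim \mathcal{H}_k \leq C_0 \hbar^{-d}$ is an immediate reading of Theorem \ref{thm:More-detailled-description_of_spectrum}(1), completing the lemma.
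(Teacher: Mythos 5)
Your proposal is correct, and the first half (the equality $\dim\mathcal{H}_k=\dim\mathfrak{H}_k$) proceeds essentially along the paper's lines: both reduce to the cumulative projectors $\oplus_{i\le k}\tau_\hbar^{(i)}$ and $\oplus_{i\le k}\lambda_\hbar^{(i)}$ and invoke Proposition \ref{pp1-3}, gluing the local bijections of Lemma \ref{lm:QTproj} to the global setting via the almost-commutation of the projectors with $\boldI_\hbar\circ\boldI_\hbar^*$. You phrase it as constructing mutually inverse maps up to $O(\hbar^\epsilon)$, the paper as a two-sided operator-norm sandwich $c\|v\|\le\|(\oplus\lambda)(\oplus\tau)(\oplus\lambda)v\|\le C\|v\|$ on $\mathrm{Im}(\oplus\lambda)$ (and symmetrically); these are the same mechanism.

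Where you genuinely depart is the two-sided bound $\asymp\hbar^{-d}$. You read it off from Theorem \ref{thm:More-detailled-description_of_spectrum}(1), and this shortcut is logically sound: the paper's proof of that Claim (1) in Subsection \ref{sub:Proofs-of-Theorem_more_details} goes through $\mathrm{rank}\,\tau_\hbar^{(k)}=\mathrm{Tr}\,\check\tau_\hbar^{(k)}+O(\hbar^{-d+\epsilon})$ and the local trace estimates of Lemma \ref{lm:trace_basic} and Corollary \ref{cor:XT_exchange-Tr}, entirely within Section \ref{sec:Proofs-of-the_mains}, so no circularity arises (the remark at the end of Subsection \ref{sub:Proof-of-Laplacian} about "also Claim (1)" should not be read as making Claim (1) depend on this Lemma). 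The paper instead gives a direct, self-contained wave-packet argument: from $\varphi_x:=\check\lambda_\hbar^{(k)}(\delta_x)$, choose a grid $Q_\hbar\subset M$ with spacing comparable to $\hbar^{1/2}$; for large spacing the set $\{\lambda_\hbar^{(k)}(\varphi_x)\}$ is linearly independent, for small spacing it spans $\mathrm{Im}\,\lambda_\hbar^{(k)}$, and counting the grid gives $\mathrm{rank}\,\lambda_\hbar^{(k)}\asymp\hbar^{-d}$. This buys independence of the Laplacian section from the dynamical trace estimates — a point the authors highlight at the start of Subsection \ref{sub:Proof-of-Laplacian} — at the cost of a somewhat hand-wavy span/linear-independence step. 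Your route is shorter and rigorous given what is already established; the paper's keeps the Laplacian proof modular. Both are fine.
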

\begin{proof}
To prove the equality $\dim\mathcal{H}_{k}=\dim\mathfrak{H_{k}}$,
it is enough to check that 
\[
\mathrm{rank}\,\tau_{\hbar}^{(k)}=\mathrm{rank}\,\lambda_{\hbar}^{(k)}\quad\mbox{for \ensuremath{0\le k\le n},}
\]
or equivalently 
\[
\mathrm{rank}\,\oplus_{i=0}^{k}\tau_{\hbar}^{(i)}=\mathrm{rank}\,\oplus_{i=0}^{k}\lambda_{\hbar}^{(i)}\quad\mbox{for \ensuremath{0\le k\le n}.}
\]
The latter relation would follow if we show 
\[
c\|\left(\oplus_{i=0}^{k}\lambda_{\hbar}^{(i)}\right)u\|_{L^{2}}\le\|\left(\oplus_{i=0}^{k}\lambda_{\hbar}^{(i)}\right)\circ\left(\oplus_{i=0}^{k}\tau_{\hbar}^{(i)}\right)\circ\left(\oplus_{i=0}^{k}\lambda_{\hbar}^{(i)}\right)u\|_{L^{2}}\le C\|\left(\oplus_{i=0}^{k}\lambda_{\hbar}^{(i)}\right)u\|_{L^{2}}
\]
 and 
\[
c\|\left(\oplus_{i=0}^{k}\tau_{\hbar}^{(i)}\right)u\|_{\mathcal{H}_{\hbar}^{r}}\le\|\left(\oplus_{i=0}^{k}\tau_{\hbar}^{(i)}\right)\circ\left(\oplus_{i=0}^{k}\lambda_{\hbar}^{(i)}\right)\circ\left(\oplus_{i=0}^{k}\tau_{\hbar}^{(i)}\right)u\|_{\mathcal{H}_{\hbar}^{r}}\le C\|\left(\oplus_{i=0}^{k}\tau_{\hbar}^{(i)}\right)u\|_{\mathcal{H}_{\hbar}^{r}}
\]
for a constant $0<c<C$, independent of $\hbar$. But these are immediate
consequences of Proposition \ref{pp1-3} (and the construction of
the projection operators $\tau_{\hbar}^{(k)}$ and $\lambda_{\hbar}^{(k)}$.

It remains to show 
\begin{equation}
\mathrm{rank}\,\lambda_{\hbar}^{(k)}\asymp\hbar^{-d}\quad\mbox{for \ensuremath{0\le k\le n}.}\label{eq:rough_dimension_estimate}
\end{equation}
For each point $x\in M$, we associate a smooth function 
\[
\varphi_{x}:=\check{\lambda}_{\hbar}^{(k)}(\delta_{x})
\]
where $\delta_{x}$ is the Dirac measure at the point $x$. (The right
hand side is well-defined and give a smooth function that concentrates
around $x$.) Consider positive constants $0<c<C$ and take a finite
subset of points $Q_{\hbar}$ on $M$ so that the mutual distance
between two points in $Q_{\hbar}$ is in between $c\cdot\hbar^{1/2}$
and $C\cdot\hbar^{1/2}$. Let 
\[
\mathscr{Q}_{\hbar}=\{\lambda_{\hbar}^{(k)}(\varphi_{x})\mid x\in Q_{\hbar}\}\subset\mathrm{Im}\,\lambda_{\hbar}^{(k)}.
\]
Note that $\lambda_{\hbar}^{(k)}(\varphi_{x})$ is close to $\varphi_{x}$
from Claim 1 in Proposition \ref{lm:lambda_Delta}. It is not difficult
to check that
\begin{enumerate}
\item if we let the constants $c,C$ large, the subset $\mathscr{Q}_{\hbar}$
is linearly independent, and
\item if we let the constants $c,C$ small, the subset $\mathscr{Q}_{\hbar}$
span the whole space $\mathrm{Im}\,\lambda_{\hbar}^{(k)}$.
\end{enumerate}
\noindent Indeed, to prove (1), we have only to observe that, if the
constants $c,C$ are sufficiently large, the $L^{2}$-scalar product
between different elements $\varphi_{x},\varphi_{x'}$ in $\mathscr{Q}_{\hbar}$
decay rapidly with respect the distance between the corresponding
points $x,x'$(relative to the size $\hbar^{1/2}$). To prove (2),
we see that, if the constants $c,C$ are sufficiently small, any element
of $\mathrm{Im}\,\lambda_{\hbar}^{(k)}$ is well approximated by the
linear combinations of the element in $\mathscr{Q}_{\hbar}$ and then,
by successive approximation, it is really contained in the subspace
spanned by $\mathscr{Q}_{\hbar}$. Clearly the claims (1) and (2)
imply (\ref{eq:rough_dimension_estimate}).
\end{proof}
Finally we compete the proof of Theorem \ref{thm:band_structure-of_Laplacian}
and Claim 1 of Theorem \ref{thm:More-detailled-description_of_spectrum}.
We show that the operator $\tilde{\Delta}_{\hbar}$ is continuously
deformed to the geometric Laplacian $\Delta_{\hbar}$, keeping the
band structure described in Theorem \ref{thm:Bund_spectrum_for_tildeDelta}.
For this purpose, we take a continuous one-parameter family of splitting
of the tangent bundle 
\[
TM=E_{+}^{(t)}\oplus E_{-}^{(t)}
\]
with $t\in[0,1]$ the parameter, such that 
\begin{itemize}
\item $E_{+}^{(0)}=E_{u}$ and $E_{-}^{(0)}=E_{s}$, that is, the splitting
above coincides with the hyperbolic splitting associated to $f$ when
$t=0$.
\item the sub-bundles $E_{\pm}^{(1)}$ for $t=1$ are $C^{\infty}$ and
orthogonal with respect to the Riemann metric $g$ on $M$.
\end{itemize}
Then we consider a continuous deformation $\{\kappa_{i,t}\}_{i=1}^{I_{\hbar}}$
of the atlas $\{\kappa_{i}\}_{i=1}^{I_{\hbar}}$ and, correspondingly,
the deformation $\{\psi_{i,t};1\le i\le I_{\hbar}\}$ of the family
$\{\psi_{i};1\le i\le I_{\hbar}\}$ of functions so that the all the
conditions in Proposition \ref{prop:Local-charts} hold uniformly
for $t\in[0,1]$, but with the sub-bundles $E^{u}$ and $E^{s}$ in
the condition (2) replaced by $E_{+}^{t}$ and $E_{-}^{t}$. 

We consider the rough Laplacian $\tilde{\Delta}_{\hbar,t}$ defined,
similarly to $\tilde{\Delta}_{\hbar}$ in (\ref{eq:relation_Delta}),
from the Euclidean rough Laplacian on local charts $\{\kappa_{i,t}\}_{i=1}^{I_{\hbar}}$
and the family of functions $\{\psi_{i,t};1\le i\le I_{\hbar}\}$.
The argument in the former part of this subsection holds true uniformly
for $t\in[0,1]$, that is, we can consider the spectral projection
operators $\lambda_{\hbar,t}^{(k)}$, $0\le k\le n+1$, for $\tilde{\Delta}_{\hbar,t}$,
which corresponds to $\lambda_{\hbar}^{(k)}$ for $\tilde{\Delta}_{\hbar}$.
Since the deformation is continuous, we see by homotopy argument that
$\mathrm{rank}\:\lambda_{\hbar,\rho}^{(k)}=\dim\mathfrak{H}_{k,\rho}$
is constant for $t\in[0,1]$. In particular we have $\mathrm{rank}\:\lambda_{\hbar,1}^{(k)}=\mathrm{rank}\:\lambda_{\hbar,0}^{(k)}$.

Note that the operator $\tilde{\Delta}_{\hbar,1}$ is close to the
geometric rough Laplacian $\Delta_{\hbar}$. In fact, since the derivative
$(D\kappa_{i,1})_{0}$ at the origin is an isometry with respect to
the Euclidean metric on $\real^{2d}$ and the Riemann metric $g$
on $M$, we can check by using the local expression of the geometric
rough Laplacian in Proposition \ref{eq:Delta_local_coordinates} that
we have
\[
\|\hbar\tilde{\Delta}_{\hbar,1}-\hbar\Delta_{\hbar}:(\mathcal{D}(\Delta_{\hbar}),\|\cdot\|_{\Delta_{\hbar}})\to L_{N}^{2}(P)\|\le C\hbar^{\epsilon}.
\]
Therefore, by the perturbation theorem \cite[chap.IV, th. 1.16]{kato_book}
for closed operators, we obtain the ``band structure'' stated in
the former part of Theorem \ref{thm:band_structure-of_Laplacian}.
We can also see that the number of eigenvalues of the geometric rough
Laplacian $\Delta_{\hbar}=D^{*}D$ in the $k$-th band is same as
that for $\tilde{\Delta}_{\hbar}$. Hence, from Lemma \ref{lem:Relation_between_ranks},
we obtain the rough upper and lower bound on the rank of spectral
projectors $\mathfrak{P}_{k}$ in Theorem \ref{thm:band_structure-of_Laplacian}
and also Claim (1) of Theorem \ref{thm:More-detailled-description_of_spectrum}.

\section{\label{sec:2}Proof of Th. \ref{thm:Discrete-spectrum-F_G}, \ref{thm:band_structure-1}.
Extension of the transfer operator to the Grassmanian bundle.}

In this section, we explain how we obtain the results stated in Subsection
\ref{sub:Spectral-results-with-Grassman}, namely Theorem \ref{thm:Discrete-spectrum-F_G},
Theorem \ref{thm:band_structure-1}, Theorem \ref{thm:weyl_law_extended}
and Theorem \ref{cor:The-special-choice_V0} for the Grassmanian extensions.
The proof are obtained by modifying the argument in the previous sections,
and, for the most part, the extensions are rather formal and easy.
The most essential difference is in Proposition \ref{prop:Local-charts-1},
where we take systems of local coordinate charts depending on $N$(
or $\hbar$). Unlike the corresponding statement, Proposition \ref{prop:Local-charts},
the local coordinate charts will be (metrically) singular in the fiber
directions and the singularity will increase as $N$ tends to infinity.
With this choice of local coordinate charts, the non-smooth section
$E_{u}$ will look ``flat'' in the local coordinates. A problem
that may happen with this choice of singular local coordinates is
that the coordinate change transformations and the flow viewed in
such local coordinate charts may be also singular. In the proof of
Proposition \ref{prop:Local-charts-1}, we show that this problem
actually does not occur. We will give a detailed argument on this
point. Once we establish the local coordinates, we can follow the
argument in the previous sections almost literally. So we will just
give the corresponding statements to clarify the correspondence and
skip most the proofs referring those of the corresponding statements
in the previous sections.

\subsection{Discussion about the linear model\label{sub:Discussion-linear_model}}

We first discuss about the extension of the argument in Section \ref{sec:Resonances-of-linear_exp_3}
and \ref{sec:Resonance-of-hyperbolic_preq_4} about the prequantum
transfer operators for linear hyperbolic maps. Instead of a hyperbolic
symplectic linear map $B$ in (\ref{eq:hyperbolic_f}), we consider
a linear map $\widetilde{B}:\real^{2d+d'}\to\real^{2d+d'}$ of the
form
\begin{equation}
\widetilde{B}(q,p,s)=(Aq,^{t}A^{-1}p,\widehat{A}s)\label{eq:hyperbolic_linear_map_extended}
\end{equation}
where $(q,p,s)$ with $q,p\in\real^{d}$ and $s\in\real^{d'}$ denote
the coordinates on $\real^{2d+d'}=\real^{d}\oplus\real^{d}\oplus\real^{d'}$,
$A:\real^{d}\to\real^{d}$ is an expanding linear map satisfying $\|A^{-1}\|\le1/\lambda$
for some $\lambda>1$ and $\widehat{A}:\real^{d'}\to\real^{d'}$ is
a contracting linear map satisfying $\|\widehat{A}\|\le1/\lambda$.
This is a hyperbolic linear map with stable subspace $\{0\}\oplus\real^{d}\oplus\real^{d'}$
and unstable subspace $\real^{d}\oplus\{0\}\oplus\{0\}$. The $L^{2}$-normalized
transfer operator associated to $\widetilde{B}$ is 
\[
\widetilde{\mathcal{L}}:L^{2}(\real^{2d+d'})\to L^{2}(\real^{2d+d'}),\quad\widetilde{\mathcal{L}}u=\frac{1}{\sqrt{|\det\widehat{A}|}}\cdot u\circ A^{-1}.
\]
Before we proceed, we put a remark.
\begin{rem}
A simple idea to treat the transfer operator $\widetilde{\mathcal{L}}$
as above is to regard it as the tensor product of two transfer operators,
one associated to the hyperbolic linear map $A\oplus^{t}A^{-1}$ and
the other associated to the contracting linear map $\widehat{A}$.
We may then apply the results in Section \ref{sec:Resonance-of-hyperbolic_preq_4}
to the former factor and that in \ref{sec:Resonances-of-linear_exp_3}
to (the adjoint of) the latter, and show a band structure of the spectrum
of the transfer operator $\widetilde{\mathcal{L}}$ on a Hilbert space.
However the Hilbert space that appears in such an argument has singular
properties with respect to the action of non-linear diffeomorphisms
that break the product structure. For this reason, we take a similar
but different way. 
\end{rem}
Let 
\[
\Bargmann_{(x,s)}:L^{2}(\real_{(x,s)}^{2d+d'})\to L^{2}(\real_{(x,s)}^{2d+d'}\oplus\real_{(\xi_{x},\xi_{s})}^{2d+d'})
\]
be the Bargmann transform defined by 
\begin{equation}
\Bargmann_{(x,s)}:=\Bargmann_{x}\otimes\Bargmann_{s}\label{eq:Bargmann_extended}
\end{equation}
where $\Bargmann_{x}:L^{2}(\real_{x}^{2d})\to L^{2}(\real_{(x,\xi)}^{4d})$
is the slight modification of the Bargmann transform given in (\ref{eq:Bargmann_modified})
and $\Bargmann_{s}:L^{2}(\real_{s}^{d'})\to L^{2}(\real_{(s,\xi_{s})}^{2d'})$
is the standard Bargmann transform given in (\ref{eq:def_Bargman_transform})
with setting $D=d'$. Let 
\[
\Bargmann_{(x,s)}^{*}:=\Bargmann_{x}^{*}\times\Bargmann_{s}^{*}:L^{2}(\real_{(x,s)}^{2d+d'}\oplus\real_{(\xi_{x},\xi_{s})}^{2d+d'})\to L^{2}(\real_{(x,s)}^{2d+d'})
\]
be the $L^{2}$ adjoint of $\Bargmann_{(x,s)}$. The lift of the operator
$\widetilde{\mathcal{L}}$ with respect to the Bargmann transform
$\Bargmann_{(x,s)}$ is defined as before:
\[
\widetilde{\mathcal{L}}^{\mathrm{lift}}:=\Bargmann_{(x,s)}\circ\widetilde{\mathcal{L}}\circ\Bargmann_{(x,s)}^{*}:L^{2}(\real_{(x,s)}^{2d+d'}\oplus\real_{(\xi_{x},\xi_{s})}^{2d+d'})\to L^{2}(\real_{(x,s)}^{2d+d'}\oplus\real_{(\xi_{x},\xi_{s})}^{2d+d'}).
\]
Here the space $\real_{(x,s)}^{2d+d'}\oplus\real_{(\xi_{x},\xi_{s})}^{2d+d'}$
is identified with the cotangent bundle of $\real_{(x,s)}^{2d+d'}$
equipped with the coordinates
\[
(q,p,s,\xi_{q},\xi_{p},\xi_{s})=(x,s,\xi_{x},\xi_{s})
\]
where $x=(q,p)\in\real^{d}\oplus\real^{d}$ and $s\in\real^{d'}$
is the coordinates on $\real^{2d+d'}=\real^{2d}\oplus\real^{d'}$
and $\xi_{x}=(\xi_{q},\xi_{p})\in\real^{d}\oplus\real^{d}$ and $\xi_{s}\in\real^{d'}$
are their respective dual coordinates. Imitating the argument in Section
\ref{sec:Resonance-of-hyperbolic_preq_4}, we introduce a different
coordinate system 
\begin{equation}
(\nu_{q},\nu_{p},\zeta_{q},\zeta_{p},s,\xi_{s})=(\nu,\zeta,s,\xi_{s})\label{eq:coordinates_extended}
\end{equation}
on $\real^{2d+d'}\oplus\real^{2d+d'}$, where $(\nu_{q},\nu_{p},\zeta_{q},\zeta_{p})=(\nu,\zeta)\in\real^{4d}=\real^{2d}\oplus\real^{2d}$
is the coordinates introduced in Proposition \ref{prop:Normal-coordinates.}
while we do not change the coordinates $(s,\xi_{s})$. The corresponding
coordinate change transformation is written 
\[
\,\,\widetilde{\Phi}:\real_{(x,s)}^{2d+d'}\oplus\real_{(\xi_{x},\xi_{s})}^{2d+d'}\to\real_{\nu}^{2d}\oplus\real_{\zeta}^{2d}\oplus\real_{s}^{d'}\oplus\real_{\xi_{s}}^{d'}
\]
\[
\widetilde{\Phi}(q,p,s,\xi_{q},\xi_{p},\xi_{s})=(\nu_{q},\nu_{p},\zeta_{p},\zeta_{q},s,\xi_{s}).
\]
By this transformation, the standard symplectic form $\widetilde{\Omega}_{0}=dx\wedge d\xi_{x}+ds\wedge d\xi_{s}$
is transferred to 
\[
(D\widetilde{\Phi}^{*})^{-1}(\widetilde{\Omega}_{0})=d\nu_{q}\wedge d\nu_{p}+d\zeta_{p}\wedge d\zeta_{q}+ds\wedge d\xi_{s}
\]
and the metric $\widetilde{g}_{0}=\frac{1}{2}dx^{2}+2d\xi^{2}+ds^{2}+d\xi_{s}^{2}$
is transferred to the standard Euclidean metric 
\[
(D\widetilde{\Phi}^{*})^{-1}(g_{0})=d\nu^{2}+d\zeta^{2}+ds^{2}+d\xi_{s}^{2}.
\]
The unitary operator associated to the coordinate change $\widetilde{\Phi}$
is defined as 
\[
\widetilde{\Phi}^{*}:L^{2}\left(\real_{\nu}^{2d}\oplus\real_{\zeta}^{2d}\oplus\real_{s}^{d'}\oplus\real_{\xi_{s}}^{d'}\right)\to L^{2}\left(\real_{(x,s)}^{2d+d'}\oplus\real_{(\xi_{x},\xi_{s})}^{2d+d'}\right),\quad\widetilde{\Phi}^{*}u:=u\circ\widetilde{\Phi}.
\]
Under these settings, we can follow the arguments in Section \ref{sec:Resonance-of-hyperbolic_preq_4}
and obtain the next proposition, which corresponds to Proposition
\ref{prop:4.5}. 
\begin{prop}
\label{prop:Product_expression_of_tildeL} The following diagram commutes:
\begin{equation}
\begin{CD}L^{2}\left(\real_{(x,s)}^{2d+d'}\right)@>\widetilde{\mathcal{L}}>>L^{2}\left(\real_{(x,s)}^{2d+d'}\right)\\
@AA\widetilde{\mathcal{U}}A@AA\widetilde{\mathcal{U}}A\\
L^{2}\left(\real_{\nu_{q}}^{d}\right)\otimes L^{2}\left(\real_{(\zeta_{p},\xi_{s})}^{d+d'}\right)@>L_{A}\otimes L_{A\oplus^{t}\widehat{A}^{-1}}>>L^{2}\left(\real_{\nu_{q}}^{d}\right)\otimes L^{2}\left(\real_{(\zeta_{p},\xi_{s})}^{d+d'}\right)
\end{CD}\label{eq:expression_tensorproduct-2-1}
\end{equation}
with the unitary operator $\widetilde{\mathcal{U}}$ defined by 
\[
\widetilde{\mathcal{U}}=\Bargmann_{(x,s)}^{*}\circ\widetilde{\Phi}^{*}\circ(\Bargmann_{\nu_{q}}\otimes\Bargmann_{(\zeta_{p},\xi_{s})}).
\]
Equivalently, for the lifted operators, the following diagram commutes:
\begin{equation}
\begin{CD}L^{2}\left(\real_{(x,s)}^{2d+d'}\oplus\real_{(\xi_{x},\xi_{s})}^{2d+d'}\right)@>\widetilde{\mathcal{L}}^{\mathrm{lift}}>>L^{2}\left(\real_{(x,s)}^{2d+d'}\oplus\real_{(\xi_{x},\xi_{s})}^{2d+d'}\right)\\
@AA\widetilde{\Phi}^{*}A@AA\widetilde{\Phi}^{*}A\\
L^{2}\left(\real_{\nu}^{2d}\right)\otimes L^{2}\left(\real_{\zeta}^{2d}\oplus(\real_{s}^{d'}\oplus\real_{\xi_{s}}^{d'})\right)@>\Llift_{A}\otimes\Llift_{A\oplus^{t}\widehat{A}^{-1}}>>L^{2}\left(\real_{\nu}^{2d}\right)\otimes L^{2}\left(\real_{\zeta}^{2d}\oplus(\real_{s}^{d'}\oplus\real_{\xi_{s}}^{d'})\right)
\end{CD}\label{cd:expression_as_tensorproduct-1}
\end{equation}

\end{prop}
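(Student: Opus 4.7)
The plan is to mirror the proof of Proposition~\ref{prop:4.5} closely, exploiting the fact that $\widetilde{B}$ splits as the direct sum of the symplectic hyperbolic linear map $B = A \oplus {}^tA^{-1}$ on $\real^{2d}$ and the pure contraction $\widehat{A}$ on $\real^{d'}$. Correspondingly, the operator $\widetilde{\mathcal{L}}$, the Bargmann transform $\Bargmann_{(x,s)}$ and the coordinate change $\widetilde{\Phi}$ all factor compatibly, reducing the claim to two parallel commuting diagrams that can be tensored together.

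First, I would record the tensor factorization. Since $|\det B|=1$ and $|\det\widetilde{B}|=|\det\widehat{A}|$, one has
\[
\widetilde{\mathcal{L}} \;=\; \prequantumL_B \;\otimes\; \mathcal{L}^{\mathrm{norm}}_{\widehat{A}},\qquad \prequantumL_B u=u\circ B^{-1},\qquad \mathcal{L}^{\mathrm{norm}}_{\widehat{A}}v=|\det\widehat{A}|^{-1/2}\cdot v\circ\widehat{A}^{-1},
\]
and both factors (hence $\widetilde{\mathcal{L}}$) are unitary. By the definition $\Bargmann_{(x,s)}=\Bargmann_{x}\otimes\Bargmann_{s}$ in \eqref{eq:Bargmann_extended} and the fact that $\widetilde{\Phi}$ acts as $\Phi$ on $(x,\xi_x)$ and as the identity on $(s,\xi_s)$, proving the diagram reduces to proving it separately on the two tensor factors.

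Second, I would apply Proposition~\ref{prop:4.5} verbatim to the symplectic factor $\prequantumL_B$: this produces the commutative diagram conjugating $\prequantumL_B$ by $\mathcal{U}=\Bargmann_x^{*}\circ\Phi^{*}\circ(\Bargmann_{\nu_q}\otimes\Bargmann_{\zeta_p})$ to $U_A\otimes U_A$ on $L^2(\real_{\nu_q}^d)\otimes L^2(\real_{\zeta_p}^d)$, which supplies the $\nu_q$ tensor factor and the $\zeta_p$ part of the second factor.

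Third, I would treat the contracting factor $\mathcal{L}^{\mathrm{norm}}_{\widehat{A}}$ by switching the Lagrangian subspace from $\real_s^{d'}$ to $\real_{\xi_s}^{d'}$, using the generalized Bargmann transform of Subsection~\ref{sub:Bargmann_transform_generalized}. Writing $\Bargmann_{\xi_s}:L^2(\real_{\xi_s}^{d'})\to L^2(\real_s^{d'}\oplus\real_{\xi_s}^{d'})$ for this generalized Bargmann transform, I would adapt the proof of Lemma~\ref{lm:lift_of_LA} to compute, via the Gaussian integral \eqref{eq:Gaussian_integrale}, that $\Bargmann_s\circ \mathcal{L}^{\mathrm{norm}}_{\widehat{A}}\circ\Bargmann_s^{*}$ equals $\Bargmann_{\xi_s}\circ(\text{const}\cdot L_{{}^t\widehat{A}^{-1}})\circ\Bargmann_{\xi_s}^{*}$ on the image of $\Bargmann_s$. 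This is the phase-space manifestation of the Fourier-duality identity $\mathcal{F}\circ L_{\widehat{A}}=|\det\widehat{A}|\cdot L_{{}^t\widehat{A}^{-1}}\circ\mathcal{F}$, packaged so that the normalization $|\det\widehat{A}|^{-1/2}$ combines with the metaplectic correction $d(\widehat{A})$ and the Jacobian from changing the Lagrangian to give the transfer operator associated to the \emph{expanding} map ${}^t\widehat{A}^{-1}$.

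Finally, I would assemble the tensor product. The operator $\Bargmann_{\zeta_p}\otimes\Bargmann_{\xi_s}$ is by definition the generalized Bargmann transform $\Bargmann_{(\zeta_p,\xi_s)}$ appearing in $\widetilde{\mathcal{U}}$, and $L_A\otimes L_{{}^t\widehat{A}^{-1}}=L_{A\oplus{}^t\widehat{A}^{-1}}$ on $L^2(\real_{(\zeta_p,\xi_s)}^{d+d'})$. Combining the outputs of Steps~2 and~3 and taking the tensor product gives the upper diagram \eqref{eq:expression_tensorproduct-2-1}. The lower diagram \eqref{cd:expression_as_tensorproduct-1} then follows from the general relation between an operator and its Bargmann lift, namely
\[
\Llift_{A\oplus{}^t\widehat{A}^{-1}}\;=\;\BargmannP_{(\zeta_p,\xi_s)}\circ L_{A\oplus{}^t\widehat{A}^{-1}}\circ\BargmannP_{(\zeta_p,\xi_s)}
\]
and the analogous identity for $\Llift_A$, together with \eqref{eq:product_form_of_the_lifted_operator} that ensures the lifted operators vanish on the orthogonal complement of the image of the Bargmann transform.

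The main obstacle will be Step~3: one must set up the Bargmann transform with the non-standard Lagrangian $\real_{\xi_s}^{d'}$ (the cotangent direction rather than the position direction) and carefully track how the normalization $|\det\widehat{A}|^{-1/2}$, the metaplectic correction factor, and the Jacobian produced by the Lagrangian swap combine to yield the clean expression $L_{A\oplus{}^t\widehat{A}^{-1}}$. This is essentially an exercise in Gaussian integration in the spirit of the proof of Lemma~\ref{lm:lift_of_LA}, but with an unfamiliar choice of Lagrangian, and is the only place where genuinely new computations are required.
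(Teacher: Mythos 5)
Your plan is essentially the route the paper intends (the paper's ``proof'' of Proposition~\ref{prop:Product_expression_of_tildeL} is only the instruction to retrace Section~\ref{sec:Resonance-of-hyperbolic_preq_4}), and the tensor factorization of $\widetilde{\mathcal{L}}$, $\Bargmann_{(x,s)}$, $\widetilde{\Phi}^{*}$ and $\Bargmann_{(\zeta_p,\xi_s)}$ into an $\real^{2d}$-part and an $\real^{d'}$-part is a clean modular way to carry that out; since $\widetilde{\mathcal{U}}=\mathcal{U}\otimes(\Bargmann_s^{*}\circ\Bargmann_{\xi_s})$ and the Bargmann projector is Lagrangian-independent (Subsection~\ref{sub:Bargmann_transform_generalized}), invoking Proposition~\ref{prop:4.5} for the symplectic factor and a single Gaussian computation for the contracting factor does assemble into the claimed conjugacy.

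The place to be careful is the normalization, and here the printed statement itself appears to have a slip. Your Step~2 correctly yields $U_A\otimes U_A$, not $L_A\otimes L_A$. In Step~3 the constant cannot be $1$: both $\mathcal{L}^{\mathrm{norm}}_{\widehat{A}}$ and the Bargmann intertwiner $\Bargmann_s^{*}\circ\Bargmann_{\xi_s}$ are unitary, while $\|L_{{}^{t}\widehat{A}^{-1}}\|_{L^2}=|\det\widehat{A}|^{-1/2}>1$; tracking $d(\widehat{A})$, the Lagrangian-swap Jacobian and the prefactor $|\det\widehat{A}|^{-1/2}$ gives $\mathrm{const}=|\det\widehat{A}|^{1/2}$, i.e.\ the unitary $U_{{}^{t}\widehat{A}^{-1}}$. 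Assembling, the bottom arrow is $U_A\otimes U_{A\oplus{}^{t}\widehat{A}^{-1}}$ rather than $L_A\otimes L_{A\oplus{}^{t}\widehat{A}^{-1}}$ as printed; this is the only version consistent with Proposition~\ref{prop:4.5} in the case $d'=0$, with unitarity of the vertical arrows, and with the bound $C_0\,|\det(A\oplus{}^{t}\widehat{A}^{-1})|^{-1/2}\,\|A\oplus{}^{t}\widehat{A}^{-1}\|_{\min}^{-k}$ in Proposition~\ref{prop:prequatum_op_for_hyp_linear-1}. Also, the identity you quote for passing to the lower diagram should read, as in \eqref{eq:Lift_A_expression}, $\Llift_{A\oplus{}^{t}\widehat{A}^{-1}}=d(A\oplus{}^{t}\widehat{A}^{-1})\cdot\BargmannP_{(\zeta_p,\xi_s)}\circ L_{(A\oplus{}^{t}\widehat{A}^{-1})\oplus{}^{t}(A\oplus{}^{t}\widehat{A}^{-1})^{-1}}\circ\BargmannP_{(\zeta_p,\xi_s)}$; as written you dropped the metaplectic factor and placed the half-space transfer operator where the full phase-space canonical map should sit.
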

We next introduce the anisotropic Sobolev space for the extended situation. 
\begin{defn}
\label{def:escape_function_Hr-1}We define the \emph{escape function
(or the weight function)} 
\[
\mathcal{\widetilde{W}}_{\hbar}^{r}:\real_{(x,s)}^{2d+d'}\oplus\real_{(\xi_{x},\xi_{s})}^{2d+d'}\to\real_{+}\quad\mbox{and}\quad\mathcal{\widetilde{W}}_{\hbar}^{r,\pm}:\real_{(x,s)}^{2d+d'}\oplus\real_{(\xi_{x},\xi_{s})}^{2d+d'}\to\real_{+}
\]
by

\begin{equation}
\mathcal{\widetilde{W}}_{\hbar}^{r}\left(x,\xi\right):=W_{\hbar}^{r}\left((\zeta_{p},\xi_{s}),(\zeta_{q},s)\right)\quad\mbox{and}\quad\mathcal{\widetilde{W}}^{r,\pm}\left(x,\xi\right):=W_{\hbar}^{r,\pm}\left((\zeta_{p},\xi_{s}),(\zeta_{q},s)\right)\label{eq:def_W_W+-1}
\end{equation}
where the functions $W_{\hbar}^{r}$ and $W_{\hbar}^{r,\pm}$ are
those defined in Definition \ref{def:Wrh} with $D=d+d'$, and $\zeta_{p},\zeta_{q},s,\xi_{s}$
are those in the coordinates (\ref{eq:coordinates_extended}). The
\emph{anisotropic Sobolev space}\textbf{ $\mathcal{\widetilde{H}}_{\hbar}^{r}\left(\real_{(x,s)}^{2d+d'}\right)$}
is the completion of the Schwartz space $\mathcal{S}(\real_{(x,s)}^{2d+d'})$
with respect to the norm 
\[
\|u\|_{\mathcal{\widetilde{H}}_{\hbar}^{r}}:=\left\Vert \mathcal{\widetilde{W}}_{\hbar}^{r}\cdot\Bargmann_{(x,s)}u\right\Vert _{L^{2}}.
\]
Let $\mathcal{\widetilde{H}}_{\hbar}^{r,\pm}(\real^{2d+d'})$ be the
Hilbert space defined in the parallel manner with $\mathcal{\widetilde{W}}_{\hbar}^{r}(\cdot)$
replaced by $\widetilde{\mathcal{W}}_{\hbar}^{r,\pm}(\cdot)$. 
\end{defn}
\[
\]
Below we fix an integer $n\ge0$ and assume that the parameter $r$
in the definition of the anisotropic Sobolev space \textbf{$\mathcal{\widetilde{H}}_{\hbar}^{r}\left(\real_{(x,s)}^{2d+d'}\right)$
}satisfies the condition 
\begin{equation}
r>n+2+2(2d+d')\label{eq:choice_of_r_2-1}
\end{equation}
which corresponds to (\ref{eq:choice_of_r_3}) in Subsection \ref{ss:trumcation}.
The next definition of projection operators correspond to Definition
\ref{eq:def_t}. Note that we will use the same symbol for the new
projection operators as the corresponding projection operators in
Definition \ref{eq:def_t}. Since these two families of projection
operators act on different Hilbert spaces, this will not introduce
confusion. 
\begin{defn}
For $0\le k\le n$, we consider the projection operators
\begin{equation}
\romet_{\hbar}^{(k)}:=\widetilde{\mathcal{U}}\circ\left(\mbox{Id}\otimes T^{\left(k\right)}\right)\circ\widetilde{\mathcal{U}}^{-1}\quad:\widetilde{\mathcal{H}}_{\hbar}^{r}(\real_{(x,s)}^{2d+d'})\to\mathcal{\widetilde{H}}_{\hbar}^{r}(\real_{(x,s)}^{2d+d'})\label{eq:def_t-1}
\end{equation}
and 
\begin{equation}
\tilde{\romet}_{\hbar}:=\mathrm{Id}-\sum_{k=0}^{n}t_{\hbar}^{(k)}=\widetilde{\mathcal{U}}\circ\left(\mbox{Id}\otimes\widetilde{T}\right)\circ\widetilde{\mathcal{U}}^{-1}\quad:\mathcal{\widetilde{H}}_{\hbar}^{r}(\real_{(x,s)}^{2d+d'})\to\mathcal{\widetilde{H}}_{\hbar}^{r}(\real_{(x,s)}^{2d+d'})\label{eq:def_tt-1}
\end{equation}
where $T^{\left(k\right)}$ and $\widetilde{T}$ are the projection
operators introduced in (\ref{eq:def_Tk}) and (\ref{eq:def_T_tilde})
page \pageref{eq:def_T_tilde} respectively for the setting $D=d+d'$.
\end{defn}
We can translate the argument in Section \ref{sec:Resonance-of-hyperbolic_preq_4}
(especially that in Subsection \ref{ss:anisoSob}) to get the next
result, which corresponds to Proposition \ref{prop:prequatum_op_for_hyp_linear}. 
\begin{prop}
\label{prop:prequatum_op_for_hyp_linear-1} The projection operators
$\romet_{\hbar}^{(k)}$, $0\le k\le n$, and $\tilde{\romet}_{\hbar}$,
defined in (\ref{eq:def_t-1}) and (\ref{eq:def_tt-1}), form a complete
set of mutually commutative projection operators on $\mathcal{\widetilde{H}}_{\hbar}^{r}(\real_{(x,s)}^{2d+d'})$.
These operators also commute with the operator $\widetilde{\mathcal{L}}$.
Consequently the space $\mathcal{\widetilde{H}}_{\hbar}^{r}(\real_{(x,s)}^{2d+d'})$
has a decomposition invariant under the action of $\widetilde{\mathcal{L}}$:
\[
\mathcal{\widetilde{H}}_{\hbar}^{r}(\real^{2d+d'})=E'_{0}\oplus E'_{1}\oplus\cdots\oplus E'_{n}\oplus\widetilde{E}'\qquad\mbox{where \ensuremath{E'_{k}=\mathrm{Im}\,\romet_{\hbar}^{(k)}}and \ensuremath{\widetilde{E}'=\mathrm{Im}\,\tilde{\romet}_{\hbar}}}.
\]
For this decomposition, we have the following estimates:
\begin{enumerate}
\item For every $0\le k\le n$ and for every $u\in E'_{k}$, we have 
\[
C_{0}^{-1}\frac{|\det(A\oplus(^{t}\widehat{A}^{-1}))|^{-1/2}}{\|A\oplus(^{t}\widehat{A}^{-1})\|_{\max}^{k}}\cdot\|u\|_{\mathcal{\widetilde{H}}_{\hbar}^{r}}\le\|\widetilde{\mathcal{L}}u\|_{\mathcal{\widetilde{H}}_{\hbar}^{r}}\le C_{0}\frac{|\det(A\oplus(^{t}\widehat{A}^{-1}))|^{-1/2}}{\|A\oplus(^{t}\widehat{A}^{-1})\|_{\min}^{k}}\cdot\|u\|_{\mathcal{\widetilde{H}}_{\hbar}^{r}}.
\]

\item The operator norm of $\widetilde{\mathcal{L}}:\widetilde{E}'\to\widetilde{E}'$
is bounded by 
\[
C_{0}\cdot\max\left\{ \frac{|\det(A\oplus(^{t}\widehat{A}^{-1}))|^{-1/2}}{\|A\oplus(^{t}\widehat{A}^{-1})\|_{\min}^{n+1}},\frac{|\det(A\oplus(^{t}\widehat{A}^{-1}))|^{1/2}}{\|A\oplus(^{t}\widehat{A}^{-1})\|_{\min}^{r}}\right\} .
\]
 
\end{enumerate}
The constant $C_{0}$ is independent of $A$ and $\hbar$.
\end{prop}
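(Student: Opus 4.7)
\emph{Proof proposal for Proposition \ref{prop:prequatum_op_for_hyp_linear-1}.}
The plan is to reduce this statement to Proposition \ref{pp:structure_of_hLA} by means of the tensor product decomposition provided by Proposition \ref{prop:Product_expression_of_tildeL}. The key observation is that, because $A:\real^{d}\to\real^{d}$ is expanding and $\widehat{A}:\real^{d'}\to\real^{d'}$ is contracting, the block-diagonal map
\[
A\oplus{}^{t}\widehat{A}^{-1}:\real^{d+d'}\to\real^{d+d'}
\]
is an expanding linear map, so Proposition \ref{pp:structure_of_hLA} applies to its transfer operator $L_{A\oplus{}^{t}\widehat{A}^{-1}}$ acting on the Hilbert space $H_{\hbar}^{r}(\real^{d+d'})$ defined as in Definition \ref{Def:H^r_h} with $D=d+d'$. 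Under our choice of $r$ in (\ref{eq:choice_of_r_2-1}) the condition $r>n+2D$ of that proposition is satisfied.

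First, I would check that through the unitary conjugation
\[
\widetilde{\mathcal{U}}=\Bargmann_{(x,s)}^{*}\circ\widetilde{\Phi}^{*}\circ(\Bargmann_{\nu_{q}}\otimes\Bargmann_{(\zeta_{p},\xi_{s})})
\]
of Proposition \ref{prop:Product_expression_of_tildeL}, the anisotropic norm $\|\cdot\|_{\widetilde{\mathcal{H}}_{\hbar}^{r}}$ on the top-left corner of (\ref{eq:expression_tensorproduct-2-1}) corresponds exactly to the product norm on $L^{2}(\real_{\nu_{q}}^{d})\otimes H_{\hbar}^{r}(\real_{(\zeta_{p},\xi_{s})}^{d+d'})$ on the bottom-left corner: this is immediate from the definitions because the weight $\widetilde{W}_{\hbar}^{r}$ in (\ref{eq:def_W_W+-1}) depends only on the $(\zeta_{p},\xi_{s},\zeta_{q},s)$ variables, hence pulls back through $\widetilde{\Phi}^{*}$ to $1\otimes W_{\hbar}^{r}$. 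Consequently the projection operators $t_{\hbar}^{(k)}$ and $\tilde{t}_{\hbar}$ of (\ref{eq:def_t-1})--(\ref{eq:def_tt-1}) correspond, via $\widetilde{\mathcal{U}}$, to $\mathrm{Id}\otimes T^{(k)}$ and $\mathrm{Id}\otimes\widetilde{T}$ on $L^{2}\otimes H_{\hbar}^{r}$, where $T^{(k)}$ and $\widetilde{T}$ are the projectors from Section \ref{sec:Resonances-of-linear_exp_3}. Mutual commutativity, completeness, and the fact that they commute with $\widetilde{\mathcal{L}}$ then follow directly from the analogous properties for $T^{(k)}$ and $\widetilde{T}$ together with the fact that all these operators commute with $L_{A}$ on the first factor (acting trivially on the polynomial structure in the second variable).

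Next, I would derive the norm estimates. On the subspace $E'_{k}=\mathrm{Im}\,t_{\hbar}^{(k)}$, the operator $\widetilde{\mathcal{L}}$ is unitarily conjugate to $L_{A}\otimes L_{A\oplus{}^{t}\widehat{A}^{-1}}|_{\mathrm{Im}\, T^{(k)}}$ multiplied by the overall normalization $|\det\widehat{A}|^{-1/2}$ coming from the definition of $\widetilde{\mathcal{L}}$. The operator norm of $L_{A}$ on $L^{2}(\real_{\nu_{q}}^{d})$ is exactly $|\det A|^{1/2}$, and on $\mathrm{Im}\, T^{(k)}=\Polynomial^{(k)}$ Proposition \ref{pp:structure_of_hLA}(1) gives the two-sided bound
\[
C_{0}^{-1}\|A\oplus{}^{t}\widehat{A}^{-1}\|_{\max}^{-k}\le\frac{\|L_{A\oplus{}^{t}\widehat{A}^{-1}}u\|_{H_{\hbar}^{r}}}{\|u\|_{H_{\hbar}^{r}}}\le C_{0}\|A\oplus{}^{t}\widehat{A}^{-1}\|_{\min}^{-k}.
\]
Multiplying and using the identity $|\det\widehat{A}|^{-1/2}\cdot|\det A|^{1/2}=|\det(A\oplus{}^{t}\widehat{A}^{-1})|^{-1/2}$ yields Claim 1. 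For Claim 2, the restriction of $L_{A\oplus{}^{t}\widehat{A}^{-1}}$ to $\mathrm{Im}\,\widetilde{T}$ is controlled by (\ref{eq:bound_norm_reminder-1}) in Proposition \ref{pp:structure_of_hLA}(2), and combining this with the $|\det A|^{1/2}$ factor from $L_{A}$ and the overall normalization gives the stated bound.

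The main obstacle is bookkeeping: the various normalization factors ($|\det\widehat{A}|^{-1/2}$ from the definition of $\widetilde{\mathcal{L}}$, $|\det A|^{1/2}$ from $L_{A}$, and the implicit unitary conjugations) must combine correctly to produce $|\det(A\oplus{}^{t}\widehat{A}^{-1})|^{\pm1/2}$, and one must verify that the coordinate change $\widetilde{\Phi}$ respects the identification of the weight functions. Beyond this, everything follows from the already established machinery in Sections \ref{sec:Resonances-of-linear_exp_3} and \ref{sec:Resonance-of-hyperbolic_preq_4}, transplanted verbatim to the extended setting.
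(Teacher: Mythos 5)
Your strategy — conjugate via the tensor decomposition in Proposition \ref{prop:Product_expression_of_tildeL}, observe that the weight function depends only on the $(\zeta_p,\xi_s,\zeta_q,s)$-variables, and apply Proposition \ref{pp:structure_of_hLA} to the expanding map $A\oplus{}^{t}\widehat{A}^{-1}$ — is exactly what the paper intends (the paper merely asserts that the argument of Subsection \ref{ss:anisoSob} translates). But the determinant bookkeeping in your norm estimate is wrong. The identity you invoke, $|\det\widehat{A}|^{-1/2}\cdot|\det A|^{1/2}=|\det(A\oplus{}^{t}\widehat{A}^{-1})|^{-1/2}$, is false: since $\det(A\oplus{}^{t}\widehat{A}^{-1})=\det A\cdot(\det\widehat{A})^{-1}$, the right-hand side equals $|\det A|^{-1/2}|\det\widehat{A}|^{1/2}$, which is the reciprocal of your left-hand side. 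Taken literally your computation yields $|\det(A\oplus{}^{t}\widehat{A}^{-1})|^{+1/2}\,\|A\oplus{}^{t}\widehat{A}^{-1}\|^{-k}$, off from the stated bound by a factor of $|\det(A\oplus{}^{t}\widehat{A}^{-1})|$, and in the Anosov application $|\det A|\neq|\det\widehat{A}|$, so the discrepancy is real.

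The source of the trouble is that you combine the single scalar $|\det\widehat{A}|^{-1/2}$ from the definition of $\widetilde{\mathcal{L}}$ with the raw transfer operators $L_A$ and $L_{A\oplus{}^{t}\widehat{A}^{-1}}$. The cleaner reading — modelled on the proof of Proposition \ref{prop:4.5}, where the tensor factors come out as the $L^2$-unitaries $U_A=|\det A|^{-1/2}L_A$ — distributes the metaplectic normalization onto each factor: $\widetilde{\mathcal{L}}\cong U_A\otimes U_{A\oplus{}^{t}\widehat{A}^{-1}}$. Unitarity of $\widetilde{\mathcal{L}}$ on $L^2$ and unitarity of the conjugating operator $\widetilde{\mathcal{U}}$ force this choice; your version $|\det\widehat{A}|^{-1/2}\,(L_A\otimes L_{A\oplus{}^{t}\widehat{A}^{-1}})$ has $L^2$-operator norm $|\det A|\,|\det\widehat{A}|^{-1}\neq1$, so it cannot be the conjugate of a unitary. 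With the corrected decomposition, $U_A$ is unitary on $L^{2}(\real_{\nu_q}^{d})$ and contributes nothing, while $U_{A\oplus{}^{t}\widehat{A}^{-1}}$ restricted to $\Polynomial^{(k)}$ contributes exactly $|\det(A\oplus{}^{t}\widehat{A}^{-1})|^{-1/2}\,\|A\oplus{}^{t}\widehat{A}^{-1}\|^{-k}$ by Proposition \ref{pp:structure_of_hLA}(1), giving Claim 1; Claim 2 then follows the same way from (\ref{eq:bound_norm_reminder-1}).
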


\subsection{Treatment of non-linearity\label{sub:nonlinear_extended}}

We next explain how we modify the argument in Section \ref{sec:Nonlinear-prequantum-maps}
on the action of non-linear diffeomorphisms. Below we give definitions
and related statements with some remarks on the correspondence to
the argument in Section \ref{sec:Nonlinear-prequantum-maps}. We omit
the proofs because we can obtain them by translating the corresponding
ones in Section \ref{sec:Nonlinear-prequantum-maps}. Recall again
that $0<\beta<1$ is the H\"older exponent of the hyperbolic splitting
for the Anosov map $f:M\rightarrow M$. We take and fix a constant
$0<\theta<1$ so small that
\begin{equation}
0<\theta<\beta/20.\label{eq:choiceTheteExt}
\end{equation}
Let $\mathbb{D}^{(d)}\left(c\right)$ be the open ball of radius $c>0$
on $\real^{d}$ with center at the origin. Instead of the \emph{``Setting
I''} given in Subsection \ref{ss:trumcation}, we consider the following
setting:

\bigskip{}

\noindent%
\framebox{\begin{minipage}[t]{1\columnwidth}%
\textbf{Setting I$^{\mathrm{ext}}$:} \textit{~For each $\hbar>0$,
there is a given set $\widetilde{\scrX}_{\hbar}$ of $C^{\infty}$
functions on $\real^{2d+d'}$ such that, for all $\psi\in\mathscr{\widetilde{X}}_{\hbar}$
and $\hbar>0$, }
\begin{description}
\item [{(C1)}] \textit{the support of $\psi$ is contained in the disk
$\mathbb{D}^{(2d+d')}\left(C_{*}\hbar^{1/2-\theta}\right)\subset\real^{2d+d'}$
and}
\item [{(C2)}] \textit{$\left|\partial_{x}^{\alpha}\psi(x)\right|<C_{\alpha}\hbar^{-\left(\frac{1}{2}-\theta\right)\left|\alpha\right|}$
for each multi-index $\alpha\in\mathbb{N}^{2d+d'}$,}
\end{description}
\textit{where $C_{*}>0$ and $C_{\alpha}>0$ are constants independent
of $\psi\in\mathscr{\widetilde{X}}_{\hbar}$ and $\hbar>0$.}%
\end{minipage}}

\bigskip{}

The Bargmann projection operator in the extended setting is 
\[
\BargmannP_{(x,s)}:=\Bargmann_{(x,s)}\circ\Bargmann_{(x,s)}^{*}:L^{2}(\real_{(x,s)}^{2d+d'}\oplus\real_{(\xi_{x},\xi_{s})}^{2d+d'})\to L^{2}(\real_{(x,s)}^{2d+d'}\oplus\real_{(\xi_{x},\xi_{s})}^{2d+d'}).
\]
The following statements correspond to Lemma \ref{lm:lift_of_multiplication_operator},
Corollary \ref{lm:XM_exchange} and \ref{cor:transpose} respectively. 

For each $\psi\in\mathscr{\widetilde{X}}_{\hbar}$, let 
\[
\multiplication^{\mathrm{lift}}(\psi)=\Bargmann_{(x,s)}\circ\multiplication(\psi)\circ\Bargmann_{(x,s)}^{*}
\]
 be the lift of the multiplication operator $\multiplication(\psi)$. 
\begin{lem}
\label{lm:lift_of_multiplication_operator-ext}There exists a constant
$C>0$ such that, for any $\hbar>0$ and $\psi\in\widetilde{\scrX}_{\hbar}$,
we have
\begin{equation}
\left\Vert \multiplication^{\mathrm{lift}}(\psi)-\multiplication(\psi\circ\pi)\circ\BargmannP_{(x,s)}\right\Vert _{L^{2}(\real^{2d+d'}\oplus\real^{2d+d'},(\mathcal{\widetilde{W}}_{\hbar}^{r})^{2})}<C\hbar^{\theta}\label{eq:approximation_of_lift_of_multiplication-1}
\end{equation}
and 
\[
\left\Vert \multiplication^{\mathrm{lift}}(\psi)-\BargmannP_{(x,s)}\circ\multiplication(\psi\circ\pi)\right\Vert _{L^{2}(\real^{2d+d'}\oplus\real^{2d+d'},(\widetilde{\mathcal{W}}_{\hbar}^{r})^{2})}<C\hbar^{\theta}.
\]
where $\pi:\real_{(x,s,\xi_{x},\xi_{s})}^{4d+2d'}\to\real_{(x,s)}^{2d+d'}$
is the natural projection defined by $\pi(x,s,\xi_{x},\xi_{s})=(x,s)$.
$ $Consequently we have 
\begin{equation}
\left\Vert \left[\BargmannP_{(x,s)},\multiplication(\psi\circ\pi)\right]\right\Vert _{L^{2}(\real^{2d+d'}\oplus\real^{2d+d'},(\mathcal{\widetilde{W}}_{\hbar}^{r})^{2})}<C\hbar^{\theta}.\label{eq:commutator_BargmannP_multiplication_op-1}
\end{equation}
The same statement holds true with $\mathcal{\widetilde{W}}_{\hbar}^{r}$
replaced by $\mathcal{\widetilde{W}}_{\hbar}^{r,\pm}$.\end{lem}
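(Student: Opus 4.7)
The plan is to adapt the proof of Lemma~\ref{lm:lift_of_multiplication_operator} essentially verbatim to the extended setting on $\real^{2d+d'}$. Since $\Bargmann_{(x,s)}=\Bargmann_{x}\otimes\Bargmann_{s}$ still has the form of a Bargmann-type transform with Gaussian wave packets, and since the class $\widetilde{\scrX}_{\hbar}$ satisfies conditions (C1), (C2) in $\real^{2d+d'}$ that are strictly parallel to the conditions in Setting I, the oscillatory-integral argument used in Section~\ref{sec:Nonlinear-prequantum-maps} should carry over with only cosmetic changes.

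Concretely, I would first write the Schwartz kernel of the operator $\multiplication^{\mathrm{lift}}(\psi)-\multiplication(\psi\circ\pi)\circ\BargmannP_{(x,s)}$ as an oscillatory integral in the variables $(x,s,\xi_{x},\xi_{s})\in\real^{4d+2d'}$, with the usual Gaussian factors coming from the Bargmann transform and with the crucial factor $\psi(y,t)-\psi(x,s)$, where $(y,t)\in\real^{d}\oplus\real^{d'}$ is the integration variable. I would then apply iterated integration by parts with the differential operator
\[
L=\frac{1-i((\xi_{x},\xi_{s})-(\xi'_{x},\xi'_{s}))\partial_{(y,t)}}{1+\hbar^{-1}|(\xi_{x},\xi_{s})-(\xi'_{x},\xi'_{s})|^{2}},
\]
gaining rapid decay in $|(\xi_{x},\xi_{s})-(\xi'_{x},\xi'_{s})|$ at scale $\hbar^{1/2}$. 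Combining this with the smoothness estimate
\[
|\psi(y,t)-\psi(x,s)|\cdot\langle\hbar^{-1/2}|(y,t)-(x,s)|\rangle^{-1}\le C\hbar^{\theta}
\]
which follows from (C2), together with the Gaussian localization in $(y,t)$, should yield a kernel estimate of the form
\[
|K(x,s,\xi_{x},\xi_{s};x',s',\xi'_{x},\xi'_{s})|\le C_{\nu}\hbar^{\theta}\langle\hbar^{-1/2}|(x,s,\xi_{x},\xi_{s})-(x',s',\xi'_{x},\xi'_{s})|\rangle^{-\nu}
\]
for arbitrarily large $\nu$.

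The final step is to invoke the analog of Lemma~\ref{lm:boundedness_of_molifier2} for the weight $\widetilde{\mathcal{W}}_{\hbar}^{r}$. Since, by Definition~\ref{def:escape_function_Hr-1}, $\widetilde{\mathcal{W}}_{\hbar}^{r}$ is obtained from the standard weight $W_{\hbar}^{r}$ on $\real^{2(d+d')}$ by a fixed linear change of variables $\widetilde{\Phi}$, it inherits the polynomial growth property
\[
\widetilde{\mathcal{W}}_{\hbar}^{r}(x,s,\xi_{x},\xi_{s})\le C\cdot\widetilde{\mathcal{W}}_{\hbar}^{r}(x',s',\xi'_{x},\xi'_{s})\cdot\langle\hbar^{-1/2}|(x,s,\xi_{x},\xi_{s})-(x',s',\xi'_{x},\xi'_{s})|\rangle^{2r}
\]
from the corresponding estimate~\eqref{eq:W_order_function} on $W_{\hbar}^{r}$. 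Schur's inequality applied to the weighted kernel then yields the first estimate of the lemma once $\nu$ is taken sufficiently large. The second estimate follows by the same argument applied to the adjoint (i.e.\ reversing the order of $\multiplication(\psi\circ\pi)$ and $\BargmannP_{(x,s)}$), and the commutator estimate~\eqref{eq:commutator_BargmannP_multiplication_op-1} is an immediate consequence of combining the two. The identical estimates for $\widetilde{\mathcal{W}}_{\hbar}^{r,\pm}$ are obtained in the same way.

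No serious obstacle is anticipated, since the only ingredients required are the tensor-product structure of $\Bargmann_{(x,s)}$, the standard kernel estimate for the Bargmann projector (which tensorizes immediately), and the invariance properties of the new weight under polynomial perturbations at scale $\hbar^{1/2}$. The mild technical point, which is worth verifying carefully, is that the choice $\theta<\beta/20$ in~\eqref{eq:choiceTheteExt} (rather than $\theta<\beta/8$ as in~\eqref{eq:cond_theta}) is comfortably sufficient to absorb the loss coming from the extra $d'$ dimensions when one checks the integrability conditions in Schur's inequality at the chosen value of $r$ satisfying~\eqref{eq:choice_of_r_2-1}.
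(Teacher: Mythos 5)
Your proposal is correct and takes essentially the same approach as the paper, which (in Subsection~\ref{sub:nonlinear_extended}) explicitly omits the proof, saying it is obtained by translating the proof of Lemma~\ref{lm:lift_of_multiplication_operator}. One small caveat: the final paragraph's concern about the choice $\theta<\beta/20$ is misplaced for this particular lemma, since the proof depends only on conditions (C1)--(C2) of Setting I$^{\mathrm{ext}}$ and the condition $r>n+2+2(2d+d')$ on the order, not on the H\"older exponent $\beta$; the sharper bound on $\theta$ is required only later, where the non-linearity of diffeomorphisms in $\widetilde{\scrG}_{\hbar}$ enters (e.g.\ in Proposition~\ref{lm:L_g_Y_almost_identity-1}).
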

\begin{cor}
\label{lm:XM_exchange-1} The multiplication operator $\multiplication(\psi)$
by $\psi\in\widetilde{\scrX}_{\hbar}$ extends to a bounded operator
on $\mathcal{\widetilde{H}}_{\hbar}^{r}(\real^{2d+d'})$ and, for
the operator norm, we have $\|\multiplication(\psi)\|_{\mathcal{\widetilde{H}}_{\hbar}^{r}(\real^{2d+d'})}<\|\psi\|_{\infty}+C\hbar^{\theta}$
for all $\psi\in\mathscr{\widetilde{X}}_{\hbar}$, with a constant
$C>0$ independent of $\hbar$ and $\psi$. 
\end{cor}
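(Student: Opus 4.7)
The plan is to follow the same strategy used for Corollary \ref{lm:XM_exchange} in Section \ref{sec:Nonlinear-prequantum-maps}, replacing each ingredient by its analogue in the extended (Grassmanian) setting. Since the norm on $\mathcal{\widetilde{H}}_{\hbar}^{r}(\real^{2d+d'})$ is defined via the isometric embedding $\Bargmann_{(x,s)}:\mathcal{\widetilde{H}}_{\hbar}^{r}(\real^{2d+d'})\to L^{2}(\real^{4d+2d'},(\mathcal{\widetilde{W}}_{\hbar}^{r})^{2})$, the operator norm of $\multiplication(\psi)$ on the anisotropic Sobolev space coincides with that of its lift
\[
\multiplication^{\mathrm{lift}}(\psi)=\Bargmann_{(x,s)}\circ\multiplication(\psi)\circ\Bargmann_{(x,s)}^{*}
\]
restricted to the image $\mathrm{Im}\,\Bargmann_{(x,s)}\subset L^{2}(\real^{4d+2d'},(\mathcal{\widetilde{W}}_{\hbar}^{r})^{2})$.

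First I would invoke Lemma \ref{lm:lift_of_multiplication_operator-ext}, which has just been stated, to replace $\multiplication^{\mathrm{lift}}(\psi)$ by the approximation $\multiplication(\psi\circ\pi)\circ\BargmannP_{(x,s)}$ up to an error bounded by $C\hbar^{\theta}$ in operator norm on $L^{2}(\real^{4d+2d'},(\mathcal{\widetilde{W}}_{\hbar}^{r})^{2})$. Since multiplication by a bounded function $\psi\circ\pi$ has operator norm exactly $\|\psi\|_{\infty}$ on any weighted $L^{2}$-space (the weight only depends on the variables, not on $\psi$), the control reduces to boundedness of the Bargmann projector $\BargmannP_{(x,s)}$ on the weighted space.

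The key ingredient is then the analogue of Corollary \ref{cor:BargmannP_bdd}, stating that $\BargmannP_{(x,s)}$ is a bounded operator on $L^{2}(\real^{4d+2d'},(\mathcal{\widetilde{W}}_{\hbar}^{r})^{2})$ with operator norm bounded independently of $\hbar$. This is where I expect the only (mild) technicality: one must verify that the escape function $\mathcal{\widetilde{W}}_{\hbar}^{r}$ defined in (\ref{eq:def_W_W+-1}) satisfies the slow-variation property
\[
\mathcal{\widetilde{W}}_{\hbar}^{r}(x,s,\xi_{x},\xi_{s})\le C\cdot\mathcal{\widetilde{W}}_{\hbar}^{r}(y,t,\eta_{x},\eta_{s})\cdot\langle\hbar^{-1/2}|(x-y,s-t,\xi_{x}-\eta_{x},\xi_{s}-\eta_{s})|\rangle^{2r},
\]
so that the general Schur-inequality argument of Lemma \ref{lm:boundedness_of_molifier2} applies to the explicit Gaussian kernel $K_{\mathcal{P}}$ of $\BargmannP_{(x,s)}=\BargmannP_{x}\otimes\BargmannP_{s}$ (cf. (\ref{eq:Bargman_Kernel})). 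This slow-variation property is immediate from the definition of $W^{r}$ in (\ref{eq:def_order_function_m}) applied to the variables $((\zeta_{p},\xi_{s}),(\zeta_{q},s))$, because $(\zeta,s,\xi_{s})$ depends linearly on $(x,s,\xi_{x},\xi_{s})$ by (\ref{eq:coordinates_extended}).

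Combining these two steps yields
\[
\|\multiplication(\psi)\|_{\mathcal{\widetilde{H}}_{\hbar}^{r}(\real^{2d+d'})}\le\|\multiplication(\psi\circ\pi)\|_{L^{2}(\real^{4d+2d'},(\mathcal{\widetilde{W}}_{\hbar}^{r})^{2})}\cdot\|\BargmannP_{(x,s)}\|+C\hbar^{\theta}\le\|\psi\|_{\infty}+C'\hbar^{\theta}
\]
for some constant $C'>0$ independent of $\hbar$ and $\psi\in\widetilde{\mathscr{X}}_{\hbar}$, as required. The whole argument is essentially a transcription of the proof of Corollary \ref{lm:XM_exchange} with the weights, Bargmann projectors and dimension indices replaced by their extended counterparts, and no new idea is needed beyond the already-stated Lemma \ref{lm:lift_of_multiplication_operator-ext}.
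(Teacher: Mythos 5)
Your overall strategy is the right one, and it coincides with what the paper does (Section \ref{sec:2} explicitly says these statements are obtained by translating the proofs of Section \ref{sec:Nonlinear-prequantum-maps}, so mirroring the proof of Corollary \ref{lm:XM_exchange} is precisely what is wanted). There is, however, a genuine slip in the final display that makes the stated chain of inequalities invalid as written.

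You correctly state at the outset that the operator norm of $\multiplication(\psi)$ on $\mathcal{\widetilde{H}}_{\hbar}^{r}(\real^{2d+d'})$ equals the norm of $\multiplication^{\mathrm{lift}}(\psi)$ restricted to $\mathrm{Im}\,\Bargmann_{(x,s)}$. But then you claim the control ``reduces to boundedness of the Bargmann projector $\BargmannP_{(x,s)}$ on the weighted space,'' and the final display reads
\[
\|\multiplication(\psi)\|_{\mathcal{\widetilde{H}}_{\hbar}^{r}}
\le\|\multiplication(\psi\circ\pi)\|_{L^{2}((\mathcal{\widetilde{W}}_{\hbar}^{r})^{2})}\cdot\|\BargmannP_{(x,s)}\|+C\hbar^{\theta}
\le\|\psi\|_{\infty}+C'\hbar^{\theta}.
\]
The last inequality does not hold in general: the extended analogue of Corollary \ref{cor:BargmannP_bdd} gives only $\|\BargmannP_{(x,s)}\|_{L^{2}((\mathcal{\widetilde{W}}_{\hbar}^{r})^{2})}\le C_{0}$ for some constant $C_{0}$, and since the weight $\mathcal{\widetilde{W}}_{\hbar}^{r}$ is non-constant, $\BargmannP_{(x,s)}$ is not an orthogonal projection for the weighted inner product, so one cannot take $C_{0}=1$. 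Your estimate as written therefore yields $\|\multiplication(\psi)\|\le C_{0}\|\psi\|_{\infty}+C\hbar^{\theta}$, which is weaker than the claimed bound.

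The fix is simple and is exactly what the terse proof of Corollary \ref{lm:XM_exchange} relies on: once you restrict $\multiplication^{\mathrm{lift}}(\psi)$ to $\mathrm{Im}\,\Bargmann_{(x,s)}$, the Bargmann projector $\BargmannP_{(x,s)}=\Bargmann_{(x,s)}\Bargmann_{(x,s)}^{*}$ acts as the identity on that subspace, so it contributes no multiplicative factor at all. Concretely, for $v=\Bargmann_{(x,s)}u$ one has $\BargmannP_{(x,s)}v=v$, hence by Lemma \ref{lm:lift_of_multiplication_operator-ext}
\[
\|\multiplication^{\mathrm{lift}}(\psi)v\|_{L^{2}((\mathcal{\widetilde{W}}_{\hbar}^{r})^{2})}
\le\|\multiplication(\psi\circ\pi)v\|_{L^{2}((\mathcal{\widetilde{W}}_{\hbar}^{r})^{2})}+C\hbar^{\theta}\|v\|
\le\bigl(\|\psi\|_{\infty}+C\hbar^{\theta}\bigr)\|v\|,
\]
and no appeal to the weighted boundedness of $\BargmannP_{(x,s)}$ is needed for this corollary (that boundedness is used elsewhere, e.g. in proving the Lemma itself, but not here). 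The slow-variation property of $\mathcal{\widetilde{W}}_{\hbar}^{r}$ you discuss is correct and is indeed what underlies Lemma \ref{lm:lift_of_multiplication_operator-ext}, so that part of your proposal is sound.
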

~
\begin{cor}
\label{cor:transpose-1}For $u,v\in\mathcal{\widetilde{H}}_{\hbar}^{r}(\real^{2d+d'})$
and $\ensuremath{\psi\in\widetilde{\scrX}_{\hbar}}$ we have
\[
\left|(u,\psi\cdot v)_{\mathcal{\widetilde{H}}_{\hbar}^{r}(\real^{2d+d'})}-(\overline{\psi}\cdot u,v)_{\mathcal{\widetilde{H}}_{\hbar}^{r}(\real^{2d+d'})}\right|<C\hbar^{\theta}\cdot\|u\|_{\mathcal{\widetilde{H}}_{\hbar}^{r}(\real^{2d+d'})}\cdot\|v\|_{\mathcal{\widetilde{H}}_{\hbar}^{r}(\real^{2d+d'})}
\]
where\textup{ $C$ is a constant independent of $u$, $v$, $\psi$
and $\hbar$. }
\end{cor}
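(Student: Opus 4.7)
The plan is to mimic literally the proof of Corollary \ref{cor:transpose} given earlier, only substituting the extended Bargmann transform $\Bargmann_{(x,s)}$ and weight function $\widetilde{\mathcal{W}}_{\hbar}^{r}$ for $\Bargmann_{x}$ and $\mathcal{W}_{\hbar}^{r}$ respectively, and invoking the extended versions of the ``almost-commutation'' estimates that have just been stated. The essential algebraic identity is that multiplication by a real function commutes with itself and equals its own adjoint, so the entire content of the corollary is the assertion that the Bargmann projector $\BargmannP_{(x,s)}$ almost commutes with multiplication by $\psi\circ\pi$ in the relevant weighted $L^{2}$ norm.

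Concretely, I would first rewrite the inner product on the source space in terms of lifted quantities using the defining isometry $\Bargmann_{(x,s)}:\widetilde{\mathcal{H}}_{\hbar}^{r}\hookrightarrow L^{2}(\real^{4d+2d'},(\widetilde{\mathcal{W}}_{\hbar}^{r})^{2})$ to obtain
\[
(u,\psi\cdot v)_{\widetilde{\mathcal{H}}_{\hbar}^{r}} = (\Bargmann_{(x,s)}u,\,\multiplication^{\mathrm{lift}}(\psi)\circ\Bargmann_{(x,s)}v)_{L^{2}(\real^{4d+2d'},(\widetilde{\mathcal{W}}_{\hbar}^{r})^{2})}.
\]
Then I would apply the second inequality of Lemma \ref{lm:lift_of_multiplication_operator-ext} to replace $\multiplication^{\mathrm{lift}}(\psi)$ by $\BargmannP_{(x,s)}\circ\multiplication(\psi\circ\pi)$ at the cost of an error bounded by $C\hbar^{\theta}\|u\|_{\widetilde{\mathcal{H}}_{\hbar}^{r}}\|v\|_{\widetilde{\mathcal{H}}_{\hbar}^{r}}$. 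At this point $\BargmannP_{(x,s)}$ can be absorbed into $\Bargmann_{(x,s)}u$ since $\BargmannP_{(x,s)}\circ\Bargmann_{(x,s)}=\Bargmann_{(x,s)}$.

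Next I would perform the dual manipulation on $(\overline{\psi}\cdot u,v)_{\widetilde{\mathcal{H}}_{\hbar}^{r}}$, this time using the first inequality of Lemma \ref{lm:lift_of_multiplication_operator-ext} to pull $\multiplication(\overline{\psi}\circ\pi)$ through $\BargmannP_{(x,s)}$ on the left, again up to an error $O(\hbar^{\theta})\|u\|\|v\|$. After both substitutions, the two expressions become
\[
(\Bargmann_{(x,s)}u,\,\multiplication(\psi\circ\pi)\cdot\Bargmann_{(x,s)}v)_{L^{2}(\real^{4d+2d'},(\widetilde{\mathcal{W}}_{\hbar}^{r})^{2})}
\]
and its genuine adjoint pairing, which agree exactly because $\widetilde{\mathcal{W}}_{\hbar}^{r}$ is a real nonnegative weight and $\multiplication(\psi\circ\pi)$ is scalar multiplication by a function. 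Subtracting yields the required bound.

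There is no real obstacle here; the only thing to check is that Lemma \ref{lm:lift_of_multiplication_operator-ext} is actually applicable to the functions occurring, which follows directly from the fact that $\psi\in\widetilde{\scrX}_{\hbar}$ is assumed to satisfy conditions (C1) and (C2) of \textbf{Setting I$^{\mathrm{ext}}$}. The slightly more delicate point—that the weight $\widetilde{\mathcal{W}}_{\hbar}^{r}$, built from a different coordinate decomposition than in the $2d$-dimensional case, continues to satisfy the order-function estimate analogous to \eqref{eq:estmate_on_Wr}—is precisely what is used in the proof of Lemma \ref{lm:lift_of_multiplication_operator-ext} itself, so it is absorbed there. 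Thus the corollary follows formally once that lemma is established.
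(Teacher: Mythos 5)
Your overall route is exactly the paper's intended one: in this section the authors omit proofs precisely because they are obtained by translating the corresponding $2d$-dimensional arguments, and your proof is indeed the translation of the proof of Corollary \ref{cor:transpose} with $\Bargmann_{x}$, $\mathcal{W}_{\hbar}^{r}$ and Lemma \ref{lm:lift_of_multiplication_operator} replaced by $\Bargmann_{(x,s)}$, $\widetilde{\mathcal{W}}_{\hbar}^{r}$ and Lemma \ref{lm:lift_of_multiplication_operator-ext}.

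There is, however, one small but genuine mis-step. For the direct pairing $(u,\psi v)$ you invoke the \emph{second} inequality of Lemma \ref{lm:lift_of_multiplication_operator-ext}, replacing $\multiplication^{\mathrm{lift}}(\psi)$ by $\BargmannP_{(x,s)}\circ\multiplication(\psi\circ\pi)$, and then claim to absorb $\BargmannP_{(x,s)}$ into $\Bargmann_{(x,s)}u$ using $\BargmannP_{(x,s)}\circ\Bargmann_{(x,s)}=\Bargmann_{(x,s)}$. But after this substitution $\BargmannP_{(x,s)}$ is the \emph{outermost} operator on the right-hand slot of the pairing, i.e.\ it acts on $\multiplication(\psi\circ\pi)\Bargmann_{(x,s)}v$, not on $\Bargmann_{(x,s)}u$. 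Moving it across the inner product $(\,\cdot\,,\,\cdot\,)_{L^{2}((\widetilde{\mathcal{W}}_{\hbar}^{r})^{2})}$ would require $\BargmannP_{(x,s)}$ to be self-adjoint with respect to the \emph{weighted} inner product; it is only self-adjoint for the unweighted $L^{2}$, since it does not commute with multiplication by $(\widetilde{\mathcal{W}}_{\hbar}^{r})^{2}$. The fix is to use the \emph{first} inequality instead, $\multiplication^{\mathrm{lift}}(\psi)\approx\multiplication(\psi\circ\pi)\circ\BargmannP_{(x,s)}$, so that $\BargmannP_{(x,s)}$ falls directly next to $\Bargmann_{(x,s)}v$ and collapses via $\BargmannP_{(x,s)}\Bargmann_{(x,s)}=\Bargmann_{(x,s)}$. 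The same (first) inequality also serves for the adjoint pairing $(\overline{\psi}u,v)=(\multiplication^{\mathrm{lift}}(\overline{\psi})\Bargmann_{(x,s)}u,\Bargmann_{(x,s)}v)$, since there $\BargmannP_{(x,s)}$ lands against $\Bargmann_{(x,s)}u$. After both reductions one is comparing $(\Bargmann_{(x,s)}u,\multiplication(\psi\circ\pi)\Bargmann_{(x,s)}v)$ with $(\multiplication(\overline{\psi}\circ\pi)\Bargmann_{(x,s)}u,\Bargmann_{(x,s)}v)$, which coincide exactly because $\widetilde{\mathcal{W}}_{\hbar}^{r}$ is a real, scalar weight. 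With this correction your argument is complete and coincides with the paper's.
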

The next lemma corresponds to Lemma \ref{cor:XT_exchange}.
\begin{lem}
\label{cor:XT_exchange-2}There exists a constant $C>0$ such that,
for any $\hbar>0$, $\psi\in\widetilde{\scrX}_{\hbar}$ and $0\le k\le n$,
we have 
\begin{align*}
\left\Vert \left[\multiplication(\psi),\romet_{\hbar}^{(k)}\right]\right\Vert _{\mathcal{\widetilde{H}}_{\hbar}^{r,-}(\real^{2d+d'})\to\mathcal{\widetilde{H}}_{\hbar}^{r,+}(\real^{2d+d'})}<C\hbar^{\theta} & .
\end{align*}

\end{lem}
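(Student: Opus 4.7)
The plan is to follow the proof of Lemma \ref{cor:XT_exchange} essentially verbatim, replacing the two-factor setup $(\nu,\zeta)$ with the three-factor setup $(\nu_{q},\nu_{p};\zeta_{p},\zeta_{q};s,\xi_{s})$ produced by the extended change of coordinates $\widetilde{\Phi}$.

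First I would conjugate by the Bargmann transform $\Bargmann_{(x,s)}$ and use the intertwining
\[
\romet_{\hbar}^{(k)}=\Bargmann_{(x,s)}^{*}\circ\widetilde{\Phi}^{*}\circ(\BargmannP_{\nu_{q}}\otimes\mathcal{T}_{\hbar}^{(k)})\circ(\widetilde{\Phi}^{*})^{-1}\circ\Bargmann_{(x,s)}
\]
coming from Proposition \ref{prop:Product_expression_of_tildeL} (here $\mathcal{T}_{\hbar}^{(k)}$ is the projector of Section \ref{sec:Resonances-of-linear_exp_3} taken in the $(\zeta_{p},\xi_{s})$-variables with $D=d+d'$). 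Applying Lemma \ref{lm:lift_of_multiplication_operator-ext} to replace $\multiplication^{\mathrm{lift}}(\psi)$ by $\multiplication(\psi\circ\pi)\circ\BargmannP_{(x,s)}$ (and its right-handed analogue) up to an error of size $C\hbar^{\theta}$ in the required operator norm, the proof reduces to showing
\[
\left\Vert\Bigl[\multiplication(\psi\circ\pi),\;\widetilde{\Phi}^{*}\circ(\BargmannP_{\nu_{q}}\otimes\mathcal{T}_{\hbar}^{(k)})\circ(\widetilde{\Phi}^{*})^{-1}\Bigr]\right\Vert_{L^{2}(\real^{4d+2d'},(\widetilde{\mathcal{W}}_{\hbar}^{r,-})^{2})\to L^{2}(\real^{4d+2d'},(\widetilde{\mathcal{W}}_{\hbar}^{r,+})^{2})}\le C\hbar^{\theta}.
\]

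Second, I would estimate the Schwartz kernel $K(\nu,\zeta,s,\xi_{s};\nu',\zeta',s',\xi_{s}')$ of $\widetilde{\Phi}^{*}\circ(\BargmannP_{\nu_{q}}\otimes\mathcal{T}_{\hbar}^{(k)})\circ(\widetilde{\Phi}^{*})^{-1}$. The Bargmann projector factor $\BargmannP_{\nu_{q}}$ contributes the Gaussian decay factor $\langle\hbar^{-1/2}|(\nu_{q},\nu_{p})-(\nu_{q}',\nu_{p}')|\rangle^{-\nu}$ by Proposition \ref{prop:The-Bargmann-projector}, while the factor $\mathcal{T}_{\hbar}^{(k)}$ on the orthogonal space contributes
\[
\langle\hbar^{-1/2}|(\zeta_{p},\xi_{s})|\rangle^{-(r-k)}\langle\hbar^{-1/2}|(\zeta_{p}',\xi_{s}')|\rangle^{-(r-k)}\langle\hbar^{-1/2}|(\zeta_{q},s)-(\zeta_{q}',s')|\rangle^{-\nu}
\]
after weighting by $\widetilde{\mathcal{W}}_{\hbar}^{r,+}/\widetilde{\mathcal{W}}_{\hbar}^{r,-}$, by applying Lemma \ref{lm:Tn_bdd_L2W} (with $D=d+d'$). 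Combining these and using the triangle inequality to absorb $|(\zeta_{p},\xi_{s})|^{-(r-k)}|(\zeta_{p}',\xi_{s}')|^{-(r-k)}$ into $\langle\hbar^{-1/2}|(\zeta_{p},\xi_{s})-(\zeta_{p}',\xi_{s}')|\rangle^{-(r-k)}$ (as in passing from \eqref{eq:kernel_of_pik} to \eqref{eq:kernel_of_pik2}) yields, in the ambient coordinates $(x,s,\xi_{x},\xi_{s})$,
\[
\frac{\widetilde{\mathcal{W}}_{\hbar}^{r,+}(x,s,\xi_{x},\xi_{s})}{\widetilde{\mathcal{W}}_{\hbar}^{r,-}(x',s',\xi_{x}',\xi_{s}')}\,|K|\le C_{\nu}'\,\langle\hbar^{-1/2}|(x,s,\xi_{x},\xi_{s})-(x',s',\xi_{x}',\xi_{s}')|\rangle^{-(r-k)}
\]
for arbitrarily large $\nu$.

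Third, the kernel of the commutator with $\multiplication(\psi\circ\pi)$ is $(\psi\circ\pi(x,s)-\psi\circ\pi(x',s'))\cdot K$, and Setting I$^{\mathrm{ext}}$ gives the smoothness bound
\[
|\psi\circ\pi(x,s)-\psi\circ\pi(x',s')|\le C\hbar^{\theta}\,\langle\hbar^{-1/2}|(x,s)-(x',s')|\rangle
\]
exactly as in \eqref{eq:smoothness_of_psi_pi}. This eats one power of the decay, leaving a kernel bounded by $C\hbar^{\theta}\langle\hbar^{-1/2}|\cdot-\cdot|\rangle^{-(r-k-1)}$ (after multiplication by the weight ratio). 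Since the condition \eqref{eq:choice_of_r_2-1} on $r$ forces $r-k-1\ge r-n-1>2(2d+d')$, Schur's inequality \eqref{eq:Schur_inequality} applies and delivers the required operator-norm bound $C\hbar^{\theta}$.

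The main bookkeeping obstacle, and the only step where the extension is not purely cosmetic, is in the second step: verifying that the kernel bound from Lemma \ref{lm:Tn_bdd_L2W} is compatible with the weight $\widetilde{\mathcal{W}}_{\hbar}^{r,\pm}$ defined via the coordinates $((\zeta_{p},\xi_{s}),(\zeta_{q},s))$ rather than $((\zeta_{p}),(\zeta_{q}))$. This requires checking that the submultiplicative estimate \eqref{eq:estmate_on_Wr} still holds for $\widetilde{\mathcal{W}}_{\hbar}^{r,\pm}$, which follows from the corresponding property of $W_{\hbar}^{r,\pm}$ in dimension $D=d+d'$ together with the fact that $\widetilde{\Phi}$ is a linear isometry for the Euclidean metric; and that the differences $(\zeta,s,\xi_{s})-(\zeta',s',\xi_{s}')$ are controlled by $(x,s,\xi_{x},\xi_{s})-(x',s',\xi_{x}',\xi_{s}')$ up to a uniform constant, which is immediate from the explicit form of $\widetilde{\Phi}$.
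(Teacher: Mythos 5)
Your proposal is correct and does exactly what the paper intends here, since the paper explicitly omits this proof with the remark that it is obtained by translating the proof of Lemma \ref{cor:XT_exchange} to the extended coordinates $(\nu,\zeta,s,\xi_{s})$. One small cosmetic inaccuracy in your second step: after weighting, the contribution of $\mathcal{T}_{\hbar}^{(k)}$ (taken with $D=d+d'$, so the ``$x$''-slot is $(\zeta_{p},\xi_{s})$ and the ``$\xi$''-slot is $(\zeta_{q},s)$) decays in the full tuples, namely $\langle\hbar^{-1/2}|(\zeta_{p},\xi_{s},\zeta_{q},s)|\rangle^{-(r-k)}\langle\hbar^{-1/2}|(\zeta_{p}',\xi_{s}',\zeta_{q}',s')|\rangle^{-(r-k)}$ — the direct analogue of \eqref{eq:kernel_of_pik} — rather than with the particular split between absolute decay in $(\zeta_{p},\xi_{s})$ and a difference decay in $(\zeta_{q},s)$ that you wrote; this does not affect anything since both forms imply the combined difference bound $\langle\hbar^{-1/2}|\cdot-\cdot|\rangle^{-(r-k)}$ you then feed into Schur, where $r-k-1>2(2d+d')$ from \eqref{eq:choice_of_r_2-1} gives the required integrability.
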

The following corresponds to Lemma \ref{lm:trace_basic}, Corollary
\ref{cor:XT_pm} and Corollary \ref{cor:XT_exchange-Tr}
\begin{lem}
\label{lm:trace_basic-1}For $\psi\in\widetilde{\scrX}_{\hbar}$ and
$0\le k\le n$, the operator 
\[
\multiplication(\psi)\circ t_{\hbar}^{(k)}:\mathcal{\widetilde{H}}_{\hbar}^{r}(\real^{2d+d'})\to\mathcal{\widetilde{H}}_{\hbar}^{r}(\real^{2d+d'})
\]
 is a trace class operator. There exists a constant $C>0$, independent
of $\psi\in\mathcal{X}_{\hbar}$, $\hbar>0$ and $0\le k\le n$, such
that 
\[
\|\multiplication(\psi)\circ t_{\hbar}^{(k)}:\mathcal{H}_{\hbar}^{r}(\real^{2d})\to\mathcal{H}_{\hbar}^{r}(\real^{2d})\|_{tr}\le\frac{r(k,d)}{(2\pi\hbar)^{d}}\int|\psi(x,0)|\, dx+C\hbar^{-\theta d+\theta}
\]
and 

\[
\left|\mathrm{Tr}\,\left(\multiplication(\psi)\circ t_{\hbar}^{(k)}:\mathcal{H}_{\hbar}^{r}(\real^{2d})\to\mathcal{H}_{\hbar}^{r}(\real^{2d})\right)-\frac{r(k,d)}{(2\pi\hbar)^{d}}\int\psi(x,0)\, dx\right|\le C\hbar^{-\theta d+\theta}
\]
 where $\|\cdot\|_{tr}$ denotes the trace norm of an operator and
\[
r(k,d)=\binom{d+k-1}{d-1}=\mathrm{rank}\, T^{(k)}.
\]
The same statement holds true for $t_{\hbar}^{(k)}\circ\multiplication(\psi)$. \end{lem}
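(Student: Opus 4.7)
My plan is to follow the proof of Lemma~\ref{lm:trace_basic} essentially verbatim, substituting throughout the Grassmanian-extended ingredients: the Bargmann transform $\Bargmann_{(x,s)}=\Bargmann_x\otimes\Bargmann_s$ of \eqref{eq:Bargmann_extended}, the coordinate change $\widetilde{\Phi}$ of Subsection~\ref{sub:Discussion-linear_model}, the unitary intertwiner $\widetilde{\mathcal{U}}$ of Proposition~\ref{prop:Product_expression_of_tildeL}, and the anisotropic Sobolev space $\widetilde{\mathcal{H}}_\hbar^r$ of Definition~\ref{def:escape_function_Hr-1}. The decisive structural input is the tensor-product identity
\begin{equation*}
\widetilde{\mathcal{U}}^{-1}\circ t_\hbar^{(k)}\circ\widetilde{\mathcal{U}}=\mathrm{Id}_{L^2(\real_{\nu_q}^d)}\otimes T^{(k)}\quad\text{on }L^2(\real_{\nu_q}^d)\otimes L^2(\real_{(\zeta_p,\xi_s)}^{d+d'}),
\end{equation*}
which reduces matters to the already-controlled action of $T^{(k)}$ on the second factor.

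First I would split $T^{(k)}=\sum_{|\gamma|=k}T^{(\gamma)}$ into rank-one pieces as in \eqref{eq:def_T_alpha} and write $t_\hbar^{(k)}=\sum_\gamma t_\hbar^{(\gamma)}$. Inserting the resolution $\Bargmann_{\nu_q}^*\Bargmann_{\nu_q}=\int_{\real^{2d}}|\phi_{\nu_q,\nu_p}\rangle\langle\phi_{\nu_q,\nu_p}|\,d\nu/(2\pi\hbar)^d$ into the $L^2(\real_{\nu_q}^d)$ factor and conjugating back by $\widetilde{\mathcal{U}}$ produces, in direct analogy with \eqref{eq:expre_t^alpha}, the integral representation
\begin{equation*}
t_\hbar^{(\gamma)}=\int_{\real^{2d}}\hat{\pi}_\gamma(\nu)\,\frac{d\nu}{(2\pi\hbar)^d},\qquad\hat{\pi}_\gamma(\nu):=\widetilde{\mathcal{U}}\circ\big(|\phi_{\nu_q,\nu_p}\rangle\langle\phi_{\nu_q,\nu_p}|\otimes T^{(\gamma)}\big)\circ\widetilde{\mathcal{U}}^{-1},
\end{equation*}
with each $\hat{\pi}_\gamma(\nu)$ a rank-one operator on $\widetilde{\mathcal{H}}_\hbar^r$. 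Then, generalizing Lemma~\ref{lem:kernel_of_pi_nu}, I would estimate the kernel of $\Bargmann_{(x,s)}\circ\hat{\pi}_\gamma(\nu)\circ\Bargmann_{(x,s)}^*$ by explicit computation and integration by parts: it is Gaussian-localized at scale $\hbar^{1/2}$ about $\widetilde{\Phi}^{-1}(\nu_q,\nu_p,0,0,0,0)$ in all four directions $\nu$, $\zeta$, $s$, $\xi_s$, with polynomial decay $\langle\hbar^{-1/2}|(\zeta_p,\xi_s)|\rangle^{-(r-k)}\langle\hbar^{-1/2}|(\zeta_q,s)|\rangle^{-(r-k)}$ that the weight $\widetilde{\mathcal{W}}_\hbar^r$ absorbs thanks to the choice \eqref{eq:choice_of_r_2-1}.

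With these bounds in hand, I would conclude exactly as in the proof of Lemma~\ref{lm:trace_basic}. Set
\begin{equation*}
T_\gamma(\nu):=\big(\multiplication(\psi)-\psi(\tilde{x}(\nu),0)\big)\circ\hat{\pi}_\gamma(\nu),
\end{equation*}
where $\tilde{x}(\nu)\in\real^{2d}$ is the $x$-component of $\widetilde{\Phi}^{-1}(\nu_q,\nu_p,0,0,0,0)$. The smoothness estimate (C2) on $\psi\in\widetilde{\scrX}_\hbar$, combined with the $\hbar^{1/2}$-localization of $\hat{\pi}_\gamma(\nu)$ (in particular near $s=0$), gives $\|T_\gamma(\nu)T_\gamma(\nu')^*\|+\|T_\gamma(\nu)^*T_\gamma(\nu')\|\le C_m\hbar^{2\theta}\langle\hbar^{-1/2}|\nu-\nu'|\rangle^{-m}$, and the integral Cotlar--Stein lemma~\ref{lem:Integral-version-of-Cotlar-Stein} yields
\begin{equation*}
\Big\|\multiplication(\psi)\circ t_\hbar^{(\gamma)}-\int_{\real^{2d}}\psi(\tilde{x}(\nu),0)\,\hat{\pi}_\gamma(\nu)\,\frac{d\nu}{(2\pi\hbar)^d}\Big\|_{\widetilde{\mathcal{H}}_\hbar^r}\le C\hbar^\theta.
\end{equation*}
Because the integrand is supported in the $O(\hbar^{-d(1-2\theta)})$ Planck cells inside $\supp\psi$ (which has size $\hbar^{1/2-\theta}$ by (C1)), the triangle inequality upgrades this to the trace-norm bound $C\hbar^{-\theta d+\theta}$; taking the trace of both sides, summing over $|\gamma|=k$ with $\mathrm{Tr}\,\hat{\pi}_\gamma(\nu)=1$, and changing variables from $\nu$ to $x=\tilde{x}(\nu)$ gives both the trace-norm estimate and the trace formula. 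The trace-norm bound from $\widetilde{\mathcal{H}}_\hbar^{r,-}$ to $\widetilde{\mathcal{H}}_\hbar^{r,+}$ (analogue of Corollary~\ref{cor:XT_pm}) and the commutator estimate (analogue of Corollary~\ref{cor:XT_exchange-Tr}) then follow by combining the above with Lemma~\ref{cor:XT_exchange-2}, exactly as in the non-extended case.

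The main obstacle is the careful bookkeeping of the new $s$-direction in the decomposition above: one must verify that the Bargmann reconstruction producing the measure $d\nu/(2\pi\hbar)^d$ is inserted only on the $d$-dimensional $\nu_q$-factor and \emph{not} on the transverse $(\zeta_p,\xi_s)$-factor. This is what ensures that the trace picks up only $\psi(x,0)$ rather than $\int\psi(x,s)\,ds$, that the denominator is $(2\pi\hbar)^d$ rather than $(2\pi\hbar)^{d+d'}$, and that the correct polynomial rank appears with no extraneous factor of $\hbar^{-d'/2}$: the $(s,\xi_s)$-directions contribute only a bounded constant from the rank of $T^{(\gamma)}$, because no wave-packet resolution of identity has been inserted there. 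Tracking which tensor factor each operation lives on via Proposition~\ref{prop:Product_expression_of_tildeL} and the explicit form of $\mathcal{T}^{(\gamma)}=\Bargmann_{(\zeta_p,\xi_s)}\circ T^{(\gamma)}\circ\Bargmann_{(\zeta_p,\xi_s)}^*$ is the heart of the argument.
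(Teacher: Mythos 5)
Your proof is correct and follows the approach the paper itself indicates: translate the proof of Lemma~\ref{lm:trace_basic} into the Grassmanian-extended apparatus (extended Bargmann transform $\Bargmann_{(x,s)}$, intertwiner $\widetilde{\mathcal{U}}$, weight $\widetilde{\mathcal{W}}_\hbar^r$), and your identification of the decisive bookkeeping point is right: the wave-packet resolution of identity producing the measure $d\nu/(2\pi\hbar)^d$ is inserted only on the $d$-dimensional $L^2(\real^d_{\nu_q})$ factor, while the $(\zeta_p,\xi_s)$ factor carries only the finite-rank projector $T^{(k)}$, so no $\int(\cdot)\,ds$ integration and no extra power of $\hbar^{-1}$ appear, and the integrand is evaluated at $s=0$. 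One remark worth recording: summing your rank-one decomposition over multi-indices $\gamma\in\mathbb{N}^{d+d'}$ with $|\gamma|=k$ produces the factor $\mathrm{rank}\,T^{(k)}=\binom{d+d'+k-1}{d+d'-1}$, which matches the identity $r(k,d)=\mathrm{rank}\,T^{(k)}$ asserted in the lemma but does \emph{not} match the explicit formula $\binom{d+k-1}{d-1}$ printed there when $k\ge1$ and $d'>0$; this looks like a notational slip carried over from the non-extended Lemma~\ref{lm:trace_basic}, and is harmless for the paper's applications since only $k=0$ (where $r(0,d)=1$ regardless) is used downstream.
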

\begin{cor}
\label{cor:XT_pm-1}There exists a constant $C>0$, such that, \textup{for
$0\le k\le n$ and $\psi\in\widetilde{\scrX}_{\hbar}$,} 
\[
\|\multiplication(\psi)\circ t_{\hbar}^{(k)}:\mathcal{\widetilde{H}}_{\hbar}^{r,-}(\real^{2d+d'})\to\mathcal{\widetilde{H}}_{\hbar}^{r,+}(\real^{2d+d'})\|_{Tr}\le\frac{C\cdot r(k,d)}{(2\pi\hbar)^{d}}\int|\psi(x,0)|\, dx+C\hbar^{-\theta d+\theta}
\]

\end{cor}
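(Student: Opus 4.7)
The plan is to mirror the proof of Corollary \ref{cor:XT_pm} in the non-extended setting, adapting each step to the Grassmanian/extended framework where the underlying Bargmann transform is $\Bargmann_{(x,s)}$ and the weight functions are $\widetilde{\mathcal{W}}_\hbar^{r,\pm}$. The starting point is Lemma \ref{lm:trace_basic-1}, which already gives the ``diagonal'' trace-class estimate on $\widetilde{\mathcal{H}}_\hbar^r$; what remains is to strengthen this into an estimate between the asymmetric spaces $\widetilde{\mathcal{H}}_\hbar^{r,-} \to \widetilde{\mathcal{H}}_\hbar^{r,+}$.

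The first step is to write $t_\hbar^{(k)} = \sum_{|\alpha|=k} t_\hbar^{(\alpha)}$ as a sum of $r(k,d)$ pieces, then further decompose each $t_\hbar^{(\alpha)}$ as an integral $\int \hat{\pi}_\alpha(\nu)\, d\nu/(2\pi\hbar)^d$ of smoothly parametrized rank-one projectors, in direct analogy with (\ref{eq:def_projector_pi_alpha}) and (\ref{eq:expre_t^alpha}); here the parameter $\nu$ runs over $\mathbb{R}^{2d}$, reflecting the fact that the range of $\mathrm{Id}_{L^2(\mathbb{R}_{\nu_q}^d)} \otimes T^{(k)}$ is infinite-dimensional only in the $\nu_q$ variable, while in the $(\zeta_p,\xi_s)$ factor it is the finite-dimensional polynomial space $\mathcal{P}^{(k)}$. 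Through the unitary $\widetilde{\mathcal{U}}$, each $\hat{\pi}_\alpha(\nu)$ is a rank-one operator microlocalized at a point of $\mathbb{R}^{2d+d'} \oplus \mathbb{R}^{2d+d'}$ whose base-space projection lies in $\{s=0\}$, because $T^{(k)}$ localizes at the origin of the $(\zeta_p,\xi_s)$-space and this corresponds (through $\widetilde{\Phi}^*$) to localization on the attractor $E_u$.

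The second step is a kernel estimate for the lifted operator $\Bargmann_{(x,s)} \circ \hat{\pi}_\alpha(\nu) \circ \Bargmann_{(x,s)}^*$, paralleling Lemma \ref{lem:kernel_of_pi_nu}: the kernel $K(z_1; z_2)$ will satisfy
\[
\frac{\widetilde{\mathcal{W}}_\hbar^{r,+}(z_1)}{\widetilde{\mathcal{W}}_\hbar^{r,-}(z_2)} |K(z_1; z_2)| \le C_m\, \langle \hbar^{-1/2}|\nu_1-\nu|\rangle^{-m} \langle \hbar^{-1/2}|\nu_2-\nu|\rangle^{-m} \cdot \prod_{i=1,2} \langle \hbar^{-1/2}|(\zeta_{p,i},\xi_{s,i})|\rangle^{-(r-k)}
\]
for $z_i = (x_i,s_i,\xi_{x,i},\xi_{s,i})$ with $(\nu_i,\zeta_i)$ the corresponding transformed coordinates. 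This uses the cone definition (\ref{eq:def_Wr_+-}) of $W^{r,\pm}$ precisely as in the non-extended argument, and the hypothesis $r>n+2+2(2d+d')$ in (\ref{eq:choice_of_r_2-1}) guarantees $r-k > 2(2d+d')$ so that the induced kernel bound is Schur-integrable.

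The third step combines these ingredients: writing
\[
\multiplication(\psi) \circ t_\hbar^{(\alpha)} = \int_{\mathbb{R}^{2d}} \psi(\nu,0)\, \hat{\pi}_\alpha(\nu)\, \frac{d\nu}{(2\pi\hbar)^d} + \int_{\mathbb{R}^{2d}} \bigl(\multiplication(\psi) - \psi(\nu,0)\bigr)\hat{\pi}_\alpha(\nu)\, \frac{d\nu}{(2\pi\hbar)^d},
\]
the first integral contributes the main term $C r(k,d) (2\pi\hbar)^{-d} \int |\psi(x,0)|\, dx$ by the rank-one nature of $\hat{\pi}_\alpha(\nu)$ (whose trace norm equals its operator norm, uniformly bounded on the pair $\widetilde{\mathcal{H}}_\hbar^{r,-} \to \widetilde{\mathcal{H}}_\hbar^{r,+}$ from step two), while the difference is estimated as in (\ref{eq:trace_formula_local}) using the Lipschitz estimate $|\psi(z) - \psi(\nu,0)| \le C\hbar^\theta \langle \hbar^{-1/2}|z-(\nu,0)|\rangle$ coming from Setting I$^{\mathrm{ext}}$, yielding the remainder $C\hbar^{-\theta d + \theta}$.

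The main obstacle will be step two, namely establishing the asymmetric kernel estimate and confirming that the trace norm of $\multiplication(\psi) \circ t_\hbar^{(\alpha)}$ acquires no $\hbar$-dependent constant beyond what is stated. In particular, one must check that the ``$\psi(x,0)$'' restriction (rather than the full $\psi(x,s)$) emerges correctly from the fact that $\hat{\pi}_\alpha(\nu)$ is microlocalized on $E_u = \{s=0\}$ at the scale $\sqrt\hbar$; this requires exploiting the rapid decay of $K$ in the $s_i$ variables (which follows from the factors $\langle \hbar^{-1/2} |(\zeta_{p,i},\xi_{s,i})|\rangle^{-(r-k)}$ after integration against the dual $\xi_{s,i}$ variable) so that only the slice $s=0$ contributes at leading order.
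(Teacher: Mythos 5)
Your proposal correctly mirrors the approach the paper itself indicates (translating the proofs of Lemma \ref{lm:trace_basic} and Corollary \ref{cor:XT_pm} to the Grassmanian setting), and you isolate the one genuinely new feature: the rank-one projectors $\hat{\pi}_\alpha(\nu)$ are microlocalized at $s=0$ at scale $\sqrt{\hbar}$, so only the slice $\psi(\cdot,0)$ enters the main term, while the integration variable $\nu$ still ranges over $\mathbb{R}^{2d}$ because the infinite-rank Bargmann factor lives in the $d$-dimensional $\nu_q$-variable. This is essentially the proof the paper has in mind, and your kernel estimate between the asymmetric weights $\widetilde{\mathcal{W}}_\hbar^{r,\pm}$ is the right tool for the uniform bound on $\|\hat{\pi}_\alpha(\nu)\|$ as an operator from $\widetilde{\mathcal{H}}_\hbar^{r,-}$ to $\widetilde{\mathcal{H}}_\hbar^{r,+}$.

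One small slip worth flagging: you attribute the remainder $C\hbar^{-\theta d+\theta}$ to the argument of (\ref{eq:trace_formula_local}), but that estimate actually reads $C\hbar^{-2\theta d+\theta}$ (the support of $\psi$ has $2d$-dimensional measure $\sim\hbar^{(1-2\theta)d}$, and integrating the $\hbar^{\theta}$ per-$\nu$ bound against $d\nu/(2\pi\hbar)^d$ gives $\hbar^{\theta-2\theta d}$). The improved exponent stated in the corollary does not follow by a literal copy of Lemma \ref{lm:trace_basic} and requires the ``slight modification'' the paper alludes to in Corollary \ref{cor:XT_pm}; this discrepancy already exists in the non-extended statement, so you inherit an unexplained step rather than create a new one.
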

~
\begin{cor}
\label{cor:XT_exchange-Tr-1}There exists a constant $C>0$, such
that 
\begin{align*}
 & \left\Vert \left[\multiplication(\psi),\romet_{\hbar}^{(k)}\right]:\mathcal{\widetilde{H}}_{\hbar}^{r,-}(\real^{2d+d'})\to\mathcal{\widetilde{H}}_{\hbar}^{r,+}(\real^{2d+d'})\right\Vert _{Tr}<C\hbar^{-\theta d+\theta}\quad\mbox{for \ensuremath{0\le k\le n}}
\end{align*}
for any $\psi\in\scrX_{\hbar}$.
\end{cor}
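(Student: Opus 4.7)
The corollary is the Grassmanian-extension analog of Corollary \ref{cor:XT_exchange-Tr}, and my plan is to adapt the proof of the unextended version to the present setting, using the tools already constructed in Subsections \ref{sub:Discussion-linear_model} and \ref{sub:nonlinear_extended}. First I would decompose $t_\hbar^{(k)} = \sum_{|\alpha| = k} t_\hbar^{(\alpha)}$ into its rank-one Taylor components, so that it suffices to bound each $\|[\multiplication(\psi), t_\hbar^{(\alpha)}]\|_{Tr}$ individually. Then, mimicking the construction in (\ref{eq:def_projector_pi_alpha}), I would introduce rank-one localized projectors $\hat{\pi}_\alpha(\nu)$ parameterized by $\nu=(\nu_q,\nu_p)\in\real^{2d}$ by
$$\hat{\pi}_\alpha(\nu) := \widetilde{\mathcal{U}} \circ \bigl(((\cdot,\phi_{\nu_q,\nu_p})\otimes(\phi_{\nu_q,\nu_p},\cdot)) \otimes T^{(\alpha)}\bigr) \circ \widetilde{\mathcal{U}}^{-1},$$
using the unitary $\widetilde{\mathcal{U}}$ from Proposition \ref{prop:Product_expression_of_tildeL}. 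The Bargmann resolution of identity then yields the integral representation $t_\hbar^{(\alpha)} = \int_{\real^{2d}} \hat{\pi}_\alpha(\nu)\,d\nu/(2\pi\hbar)^d$ in the strong operator topology, in parallel to (\ref{eq:expre_t^alpha}).

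Next I would rewrite the commutator as
$$[\multiplication(\psi), t_\hbar^{(\alpha)}] = \int_{\real^{2d}} [\multiplication(\psi-\psi(\nu)), \hat{\pi}_\alpha(\nu)]\, \frac{d\nu}{(2\pi\hbar)^d},$$
observing that each integrand is a rank-$\le 2$ operator of trace norm $O(\hbar^{\theta})$: on the $\sqrt{\hbar}$-support of the wave-packet realizing $\hat{\pi}_\alpha(\nu)$ one has $|\psi(x)-\psi(\nu)|\le\sqrt{\hbar}\cdot\|\nabla\psi\|_\infty\lesssim\hbar^{\theta}$ by condition (C2) of Setting I$^{\mathrm{ext}}$. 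To assemble these local bounds into the claimed global trace-norm estimate, I would avoid the naive triangle inequality on the integral and instead apply the standard bound $\|A\|_{Tr}\le\mathrm{rank}(A)\cdot\|A\|_{op}$ to the commutator directly: the operator-norm factor is $\|[\multiplication(\psi),t_\hbar^{(k)}]\|_{op}\le C\hbar^{\theta}$, supplied by Lemma \ref{cor:XT_exchange-2}, while the rank factor is controlled via Corollary \ref{cor:XT_pm-1}, since the commutator is supported in $\supp\psi$ and its image has effective dimension at most $C\cdot\mathrm{Vol}(\supp\psi)\cdot(2\pi\hbar)^{-d}\le C\hbar^{-\theta d}$ once the Gaussian concentration in the transverse $(\zeta_p,\xi_s)$-direction built into $\hat{\pi}_\alpha(\nu)$ is taken into account.

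The main obstacle — and the only non-mechanical step — is obtaining the sharp exponent $-\theta d$ and verifying that the $d'$ extra fibre dimensions of the Grassmanian do not degrade it. The key point is that the transverse component of $\hat{\pi}_\alpha(\nu)$ is Gaussian-localized at $(s,\xi_s)=0$ at scale $\sqrt{\hbar}$, so although $\supp\psi$ has base-volume of order $\hbar^{(1/2-\theta)(2d+d')}$ in the enlarged $\real^{2d+d'}$, the effective $(s,\xi_s)$-volume seen by the commutator is only $O(\hbar^{d'/2})$, and this surplus exactly absorbs the matching $\hbar^{-d'/2}$ contribution from the transverse density of states of $T^{(\alpha)}$ (this is the same cancellation visible in Lemma \ref{lm:trace_basic-1} through the reduction of $\int|\psi(x,s)|\,dxds$ to $\int|\psi(x,0)|\,dx$). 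What remains is precisely the unstable-direction count of dimension $d$, producing the clean exponent $-\theta d$ identical to the unextended case. Combining the two factors gives $\|[\multiplication(\psi),t_\hbar^{(k)}]\|_{Tr}\le C\hbar^{-\theta d}\cdot C\hbar^{\theta}=C\hbar^{-\theta d+\theta}$, and no new ideas beyond those already deployed in Sections \ref{sec:Nonlinear-prequantum-maps} and \ref{sub:The-main-propositions} are required.
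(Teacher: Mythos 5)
Your setup --- decomposing $\romet_\hbar^{(k)}=\sum_{|\alpha|=k}\romet_\hbar^{(\alpha)}$, introducing the rank-one projectors $\hat{\pi}_{\alpha}(\nu)$ via $\widetilde{\mathcal{U}}$, and obtaining $[\multiplication(\psi),\romet_{\hbar}^{(\alpha)}] = \int_{\real^{2d}}[\multiplication(\psi-\psi(\nu)),\hat{\pi}_{\alpha}(\nu)]\,d\nu/(2\pi\hbar)^d$ with rank-$\le 2$ integrands of trace norm $O(\hbar^\theta)$ --- is exactly the translation the paper intends (the proof of Lemma \ref{lm:trace_basic}, carried over to the extended setting as in Lemma \ref{lm:trace_basic-1}). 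The gap is in the assembly step, where you replace the triangle inequality on this integral by $\|A\|_{\mathrm{Tr}}\le\mathrm{rank}(A)\cdot\|A\|_{\mathrm{op}}$ applied to the full commutator. That bound is vacuous here: $[\multiplication(\psi),\romet_\hbar^{(k)}]$ has infinite rank, because $\romet_\hbar^{(k)}=\widetilde{\mathcal{U}}\circ(\mathrm{Id}\otimes T^{(k)})\circ\widetilde{\mathcal{U}}^{-1}$ carries an identity factor on $L^2(\real^d_{\nu_q})$, and multiplication by a compactly supported $\psi$ does not reduce that identity to a finite-rank operator. The quantity you call the ``effective dimension'' of the image is really the trace norm of $\multiplication(\psi)\circ\romet_\hbar^{(k)}$ controlled by Corollary \ref{cor:XT_pm-1}; those are distinct quantities, and conflating them is where the argument breaks.

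Your exponent arithmetic inherits the problem: $\mathrm{Vol}(\supp\psi)\cdot(2\pi\hbar)^{-d}$ is not $O(\hbar^{-\theta d})$ --- after projecting $\supp\psi$ onto the $2d$ base directions one gets volume $\sim\hbar^{(1-2\theta)d}$, hence $\hbar^{-2\theta d}$ after the normalization, and the Gaussian transverse localization (which you correctly invoke to dispose of the $d'$ fibre directions, precisely the mechanism behind the reduction $\int|\psi(x,s)|\,dxds\to\int|\psi(x,0)|\,dx$ in Lemma \ref{lm:trace_basic-1}) cannot improve the power in the $\nu$-variable. The correct finish is the one you try to avoid: use $\big\|\int T(\nu)\,d\mu\big\|_{\mathrm{Tr}}\le\int\|T(\nu)\|_{\mathrm{Tr}}\,d\mu$, where $\|T(\nu)\|_{\mathrm{Tr}}\le 2\|T(\nu)\|_{\mathrm{op}}\le C\hbar^\theta$ because each $T(\nu)=[\multiplication(\psi-\psi(\nu)),\hat{\pi}_{\alpha}(\nu)]$ is rank-$\le2$, and integrate over the effective $\nu$-support against $d\nu/(2\pi\hbar)^d$. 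That is the triangle-inequality argument from the proof of Lemma \ref{lm:trace_basic}, transplanted verbatim to the Grassmanian setting, and it is the proof the paper is referring to when it says Corollary \ref{cor:XT_exchange-Tr-1} is obtained by translating Corollary \ref{cor:XT_exchange-Tr}.
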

Next let us consider the function 
\begin{equation}
\widetilde{Y}_{\hbar}:T^{*}\mathbb{R}^{2d+d'}=\mathbb{R}_{(x,s)}^{2d+d'}\oplus\mathbb{R}_{(\xi_{x},\xi_{s})}^{2d+d'}\to[0,1],\quad Y_{\hbar}(x,\xi)=\chi(\hbar^{2\theta-1/2}|((x,s),(\xi_{x},\xi_{s})|).\label{def:function_Y_hbar-1}
\end{equation}
and the operator 
\begin{equation}
\widetilde{\mathcal{Y}}_{\hbar}:L^{2}(\real_{(x,s)}^{2d+d'})\to L^{2}(\real_{(x,s)}^{2d+d'}),\quad\widetilde{\mathcal{Y}}_{\hbar}=\Bargmann_{(x,s)}^{*}\circ\multiplication(\widetilde{Y}_{\hbar})\circ\Bargmann_{(x,s)}\label{def:operation_calY_hbar-1}
\end{equation}
with $\chi$ is as in (\ref{eq:def_chi}). The following two lemmas
correspond to Lemma \ref{lem:boundedness_of_calY} and \ref{YT_exchange}
in Subsection \ref{sub:Truncation-operations_phase}. 
\begin{lem}
\label{lem:boundedness_of_calY-1} The operator \textup{$\widetilde{\mathcal{Y}}_{\hbar}$}
extends naturally to a bounded operator on \textup{$\mathcal{\widetilde{H}}_{\hbar}^{r}(\real^{2d+d'})$}
and we have\textup{ 
\[
\|\mathcal{\widetilde{Y}}_{\hbar}\|_{\mathcal{\widetilde{H}}_{\hbar}^{r}(\real^{2d+d'})}<1+C\hbar^{\theta}
\]
and
\[
\|[\mathcal{\widetilde{Y}}_{\hbar},\multiplication(\psi)]\|_{\mathcal{\widetilde{H}}_{\hbar}^{r}(\real^{2d+d'})}<C\hbar^{\theta}\quad\mbox{for any }\psi\in\mathscr{\widetilde{X}}_{\hbar}
\]
}with some positive constants $C$ independent of $\hbar$ and $\psi$. 
\end{lem}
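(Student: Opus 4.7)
The plan is to follow the strategy of the proof of Lemma \ref{lem:boundedness_of_calY} in the unextended setting, adapting it to the extended Bargmann transform $\Bargmann_{(x,s)}$ and the extended escape function $\widetilde{\mathcal{W}}_{\hbar}^{r}$. Since $\Bargmann_{(x,s)}:\mathcal{\widetilde{H}}_{\hbar}^{r}(\real^{2d+d'})\to L^{2}(\real^{2(2d+d')},(\widetilde{\mathcal{W}}_{\hbar}^{r})^{2})$ is an isometric embedding with image preserved by $\BargmannP_{(x,s)}$, the proof reduces to establishing the two corresponding estimates
\[
\left\|\BargmannP_{(x,s)}\circ\multiplication(\widetilde{Y}_{\hbar})\right\|_{L^{2}(\real^{2(2d+d')},(\widetilde{\mathcal{W}}_{\hbar}^{r})^{2})}<1+C\hbar^{\theta}
\]
and
\[
\left\|[\BargmannP_{(x,s)}\circ\multiplication(\widetilde{Y}_{\hbar})\circ\BargmannP_{(x,s)},\,\multiplication^{\mathrm{lift}}(\psi)]\right\|_{L^{2}(\real^{2(2d+d')},(\widetilde{\mathcal{W}}_{\hbar}^{r})^{2})}<C\hbar^{\theta}
\]
for $\psi\in\widetilde{\mathscr{X}}_{\hbar}$.

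The main step is the kernel estimate
\[
\left\|[\BargmannP_{(x,s)},\multiplication(\widetilde{Y}_{\hbar})]\right\|_{L^{2}(\real^{2(2d+d')},(\widetilde{\mathcal{W}}_{\hbar}^{r})^{2})}\le C\hbar^{\theta}.
\]
Since the kernel of $\BargmannP_{(x,s)}=\BargmannP_{x}\otimes\BargmannP_{s}$ decays as $\exp(-c\hbar^{-1}|\cdot-\cdot|^{2})$ on the scale $\hbar^{1/2}$, while the cutoff $\widetilde{Y}_{\hbar}$ varies on the much coarser scale $\hbar^{1/2-2\theta}$, the commutator kernel gains a factor of $\hbar^{\theta}$ (using $\theta<\beta/20<1/4$ from \eqref{eq:choiceTheteExt}). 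A direct application of Lemma~\ref{lm:boundedness_of_molifier2} (whose extended version is straightforward from the definition of $\widetilde{\mathcal{W}}_{\hbar}^{r}$ and the analogue of the pseudo-local estimate \eqref{eq:estmate_on_Wr}) will then give boundedness with the required $\hbar^{\theta}$ gain. From this commutator bound the first claim is immediate because $\multiplication(\widetilde{Y}_{\hbar})$ has $L^{2}$-norm at most $1$, while $\BargmannP_{(x,s)}$ is an orthogonal projection on $L^2$ and the defect from orthogonality on the weighted space $L^{2}(\real^{2(2d+d')},(\widetilde{\mathcal{W}}_{\hbar}^{r})^{2})$ is itself $O(\hbar^{\theta})$ by the same kernel estimate.

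For the second claim, I expand the commutator as
\[
[\BargmannP_{(x,s)}\circ\multiplication(\widetilde{Y}_{\hbar})\circ\BargmannP_{(x,s)},\,\multiplication^{\mathrm{lift}}(\psi)]
\]
and rewrite $\multiplication^{\mathrm{lift}}(\psi)=\multiplication(\psi\circ\pi)\circ\BargmannP_{(x,s)}+O(\hbar^{\theta})$ by Lemma~\ref{lm:lift_of_multiplication_operator-ext}. After this substitution, the commutator becomes, modulo $O(\hbar^{\theta})$, a commutator of the form $[\BargmannP_{(x,s)},\multiplication(\psi\circ\pi)\cdot\widetilde{Y}_{\hbar}\text{-type terms}]$, which are again controlled by the kernel estimate above together with \eqref{eq:commutator_BargmannP_multiplication_op-1}.

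The main obstacle, if any, is purely bookkeeping: one must verify that the extended weight function $\widetilde{\mathcal{W}}_{\hbar}^{r}$, defined in \eqref{eq:def_W_W+-1} through the mixed coordinates $(\zeta_{p},\xi_{s})$ and $(\zeta_{q},s)$, still satisfies the pseudolocality estimate analogous to \eqref{eq:W_order_function_hbar} uniformly in $\hbar$. This is automatic from the linear change of coordinates $\widetilde{\Phi}$ and the fact that $W_{\hbar}^{r}$ itself is an order function with controlled variation, so once this is checked the extended analogue of Lemma~\ref{lm:boundedness_of_molifier2} applies without further modification and the rest of the argument is a transcription of the proof of Lemma~\ref{lem:boundedness_of_calY}.
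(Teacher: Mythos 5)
Your proposal follows the same route the paper intends: it translates the proof of Lemma \ref{lem:boundedness_of_calY} to the extended setting, reducing to the commutator kernel estimate $\|[\BargmannP_{(x,s)},\multiplication(\widetilde{Y}_{\hbar})]\|_{L^{2}((\widetilde{\mathcal{W}}_{\hbar}^{r})^{2})}\le C\hbar^{\theta}$, invoking Lemma \ref{lm:boundedness_of_molifier2} (in its extended form) and Lemma \ref{lm:lift_of_multiplication_operator-ext}. The paper itself omits the proof with the statement that it is obtained by such a translation, and your proposal is faithful to that.

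One step in your derivation of the first claim is shaky and also unnecessary. You assert that ``the defect from orthogonality of $\BargmannP_{(x,s)}$ on the weighted space is $O(\hbar^{\theta})$.'' The weight function $\widetilde{\mathcal{W}}_{\hbar}^{r}$ varies at scale $\hbar^{1/2}$ (it is $W^{r}$ precomposed with the $\hbar^{-1/2}$-rescaling), which is exactly the concentration scale of the Bargmann kernel; hence the ratio $\widetilde{\mathcal{W}}_{\hbar}^{r}(z)/\widetilde{\mathcal{W}}_{\hbar}^{r}(z')$ is merely $O(1)$ --- not $1+O(\hbar^{\theta})$ --- on the support of the kernel, and the self-adjointness defect of $\BargmannP_{(x,s)}$ in the weighted inner product is therefore not small. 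Fortunately this claim is not needed: for $v\in\mathrm{Im}\,\BargmannP_{(x,s)}$ one has $\BargmannP_{(x,s)}\circ\multiplication(\widetilde{Y}_{\hbar})v=\multiplication(\widetilde{Y}_{\hbar})v+[\BargmannP_{(x,s)},\multiplication(\widetilde{Y}_{\hbar})]v$, and since $\|\widetilde{Y}_{\hbar}\|_{\infty}\le1$ the multiplication operator has weighted operator norm at most $1$, so the commutator estimate alone yields $\|\widetilde{\mathcal{Y}}_{\hbar}\|<1+C\hbar^{\theta}$, which is exactly how the unextended proof proceeds. With that correction, your argument is complete.
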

~
\begin{lem}
\label{YT_exchange-1} For $0\le k\le n$ and $\psi\in\mathscr{\widetilde{X}}_{\hbar}$,
we have 
\begin{align*}
 & \|(\mathrm{Id}-\mathcal{\widetilde{Y}}_{\hbar})\circ\multiplication(\psi)\circ\romet_{\hbar}^{(k)}\|_{\mathcal{\widetilde{H}}_{\hbar}^{r,-}(\real^{2d+d'})\to\mathcal{\widetilde{H}}_{\hbar}^{r,+}(\real^{2d+d'})}<C\hbar^{\theta}\intertext{and} & \|\romet_{\hbar}^{(k)}\circ(\mathrm{Id}-\mathcal{\widetilde{Y}}_{\hbar})\circ\multiplication(\psi)\|_{\mathcal{\widetilde{H}}_{\hbar}^{r,-}(\real^{2d+d'})\to\mathcal{\widetilde{H}}_{\hbar}^{r,+}(\real^{2d+d'})}<C\hbar^{\theta}
\end{align*}
with some constant $C>0$ independent of $\hbar$ and $\psi$.
\end{lem}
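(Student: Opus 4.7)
The plan is to translate the proof of Lemma \ref{YT_exchange} in the original setting almost verbatim to the extended setting, exploiting the parallelism between the operators $\mathcal{Y}_\hbar$, $\mathcal{M}(\psi)$, $t_\hbar^{(k)}$ on $\mathcal{H}_\hbar^{r,\pm}(\real^{2d})$ and their extended counterparts $\widetilde{\mathcal{Y}}_\hbar$, $\mathcal{M}(\psi)$, $t_\hbar^{(k)}$ on $\widetilde{\mathcal{H}}_\hbar^{r,\pm}(\real^{2d+d'})$. After lifting by $\Bargmann_{(x,s)}$, it suffices to estimate the operator norm of
\[
A:=\BargmannP_{(x,s)}\circ(\mathrm{Id}-\mathcal{M}(\widetilde{Y}_\hbar))\circ\BargmannP_{(x,s)}\circ\mathcal{M}^{\mathrm{lift}}(\psi)\circ\widetilde{\Phi}^*\circ(\BargmannP_{\nu_q}\otimes\mathcal{T}_\hbar^{(k)})\circ(\widetilde{\Phi}^*)^{-1}
\]
as an operator from $L^2(\real^{4d+2d'},(\widetilde{\mathcal{W}}_\hbar^{r,-})^2)$ to $L^2(\real^{4d+2d'},(\widetilde{\mathcal{W}}_\hbar^{r,+})^2)$, and the corresponding dual expression for the second claimed estimate.

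First I will write down the explicit kernel of $A$ as an integral involving the Gaussian kernel of $\BargmannP_{(x,s)}$ from the factorization \eqref{eq:Bargmann_extended}, the kernel estimate for $\Phi^*\circ(\BargmannP_{\nu_q}\otimes\mathcal{T}_\hbar^{(k)})\circ(\Phi^*)^{-1}$ that generalizes \eqref{eq:kernel_of_pik}-\eqref{eq:kernel_of_pik2} to the variables $(\nu,\zeta,s,\xi_s)$ (this is immediate from Lemma \ref{lm:Tn_bdd_L2W} tensored with the Bargmann projector in the $(s,\xi_s)$ variables), and the cut-off factors $\widetilde{Y}_\hbar$ and $\psi$. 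Next, I will use the two key support/regularity facts: on $\mathrm{supp}(\mathrm{Id}-\widetilde{Y}_\hbar)$ one has $|(x,s,\xi_x,\xi_s)|\ge\hbar^{1/2-2\theta}$, while $\psi\in\widetilde{\mathcal{X}}_\hbar$ forces $|(x,s)|\le2\hbar^{1/2-\theta}$; together with $\hbar^{1/2-2\theta}\gg\hbar^{1/2-\theta}$ these imply that any point $(x,s,\xi_x,\xi_s)$ in the support of the integrand is at phase-space distance at least $\tfrac12\hbar^{1/2-2\theta}$ from the support of $\psi\circ\pi$, so the Gaussian factor in $\BargmannP_{(x,s)}$ gains a factor $\exp(-c\hbar^{-4\theta})=\mathcal{O}(\hbar^\infty)$ along the $\BargmannP_{(x,s)}\circ\mathcal{M}^{\mathrm{lift}}(\psi)$ composition.

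The main estimate I will then carry out is the polynomially-weighted bound on the kernel $K_A(\mathbf{z};\mathbf{z}')$ of $\mathcal{M}(\widetilde{\mathcal{W}}_\hbar^{r,+})\circ A\circ\mathcal{M}(\widetilde{\mathcal{W}}_\hbar^{r,-})^{-1}$, of the form
\[
\left|\frac{\widetilde{\mathcal{W}}_\hbar^{r,+}(\mathbf{z})}{\widetilde{\mathcal{W}}_\hbar^{r,-}(\mathbf{z}')}\cdot K_A(\mathbf{z};\mathbf{z}')\right|\le C_\nu\hbar^{\theta}\langle\hbar^{-1/2}|\mathbf{z}-\mathbf{z}'|\rangle^{-\nu}
\]
for arbitrarily large $\nu$, by combining the kernel estimates above with the regularity \eqref{eq:estmate_on_Wr} of the escape functions $\widetilde{\mathcal{W}}_\hbar^{r,\pm}$ (which behaves identically to the unextended $\mathcal{W}_\hbar^{r,\pm}$ since it only depends on the $(\zeta_p,\xi_s)$ and $(\zeta_q,s)$ coordinates). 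Once this pointwise bound is in hand for sufficiently large $\nu$, Schur's inequality \eqref{eq:Schur_inequality} (via Lemma \ref{lm:boundedness_of_molifier2} adapted to the extended setting) delivers the claimed $C\hbar^\theta$ bound on the operator norm. The second inequality in the lemma is proved by the exactly parallel argument applied to the adjoint composition.

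The main obstacle is bookkeeping rather than conceptual: one must verify that the decomposition of the new coordinates $(\nu_q,\nu_p,\zeta_p,\zeta_q,s,\xi_s)$ used to define $\widetilde{\mathcal{W}}_\hbar^{r,\pm}$ plays well with the splitting of $\BargmannP_{(x,s)}$ as $\BargmannP_x\otimes\BargmannP_s$ through the change of variables $\widetilde{\Phi}$. Once one checks that the analog of \eqref{eq:kernel_of_pik} with the weights $\widetilde{\mathcal{W}}_\hbar^{r,\pm}$ holds—which it does because $t_\hbar^{(k)}$ acts trivially on the $(s,\xi_s)$ factor and $\BargmannP_{(x,s)}$ satisfies the same Gaussian kernel estimate of the form \eqref{eq:Bargman_Kernel} in dimension $2d+d'$—the proof reduces to the same integration-by-parts argument as in Lemma \ref{YT_exchange}, and the conclusion follows.
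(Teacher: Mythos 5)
Your overall plan is correct and coincides with the paper's, which for this extended lemma simply says to translate the proof of Lemma~\ref{YT_exchange} verbatim. However, one step in your justification is false and would mislead anyone trying to carry out the "bookkeeping" you defer.

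You claim that the support conditions on $\psi$ and $\mathrm{Id}-\widetilde{Y}_\hbar$ force any point in the support of the integrand to lie at phase-space distance at least $\tfrac12\hbar^{1/2-2\theta}$ from $\supp(\psi\circ\pi)$, so that the Gaussian in $\BargmannP_{(x,s)}$ contributes an $\exp(-c\hbar^{-4\theta})=\mathcal{O}(\hbar^\infty)$ factor along the composition $\BargmannP_{(x,s)}\circ\multiplication^{\mathrm{lift}}(\psi)$. This is not true: $\supp(\psi\circ\pi)$ is the cylinder $\{|(x,s)|\le C_*\hbar^{1/2-\theta}\}\times\real^{2d+d'}_{(\xi_x,\xi_s)}$, which is unbounded in $\xi$ and therefore meets $\supp(\mathrm{Id}-\widetilde{Y}_\hbar)=\{|(x,s,\xi_x,\xi_s)|\ge\hbar^{1/2-2\theta}\}$ at distance zero (take $(x,s)$ small and $|\xi|\ge\hbar^{1/2-2\theta}$). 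Consequently the operator $\BargmannP_{(x,s)}\circ(\mathrm{Id}-\multiplication(\widetilde{Y}_\hbar))\circ\BargmannP_{(x,s)}\circ\multiplication^{\mathrm{lift}}(\psi)$, by itself, is \emph{not} $\hbar^\infty$-small; if it were, the lemma would assert a much stronger bound than $\hbar^\theta$.

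The smallness actually comes from the third factor, $\widetilde\Phi^*\circ(\BargmannP_{\nu_q}\otimes\calT_\hbar^{(k)})\circ(\widetilde\Phi^*)^{-1}$, whose kernel (the extended analogue of (\ref{eq:kernel_of_pik})--(\ref{eq:kernel_of_pik2})) decays \emph{polynomially}, like $\langle\hbar^{-1/2}|(\zeta_p,\zeta_q,s,\xi_s)|\rangle^{-(r-k)}$ in the transverse variables. On the region allowed simultaneously by $\psi$ (forcing $|(x,s)|\lesssim\hbar^{1/2-\theta}$) and by $\mathrm{Id}-\widetilde{Y}_\hbar$ (forcing $|(x,s,\xi_x,\xi_s)|\gtrsim\hbar^{1/2-2\theta}$) one necessarily has $|(\zeta,s,\xi_s)|\gtrsim\hbar^{1/2-2\theta}$, and plugging this into the polynomial tail gives a gain of order $\hbar^{2\theta(r-k)}$, which is $\le\hbar^\theta$ under the assumption $r>n+2+2(2d+d')$. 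That polynomial tail, together with the weight-compatibility estimate (\ref{eq:estmate_on_Wr}) and the Gaussian control of the \emph{off-diagonal} behaviour of $\BargmannP_{(x,s)}$ (which is where the Gaussian really earns its keep), is what produces the pointwise kernel bound you wrote with the $C_\nu\hbar^\theta$ prefactor, after which Schur's test closes the argument. So the final kernel estimate and the use of Lemma~\ref{lm:boundedness_of_molifier2} are right, but you should replace the "Gaussian $\hbar^\infty$" heuristic with the correct one: a polynomial gain from $\calT_\hbar^{(k)}$ evaluated in its tails.
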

Below we give a few statements corresponding to those in Subsection
\ref{ss:nonlinear}. We now consider the following setting, in addition
to Setting I$^{\mathrm{ext}}$. This corresponds to ``Setting II''
in Subsection \ref{ss:nonlinear}.

\bigskip{}

\noindent%
\framebox{\begin{minipage}[t]{1\columnwidth}%
\noindent \textbf{Setting II$^{\mathrm{ext}}$:} \textit{~For every
$\hbar>0$, there is a given set $\widetilde{\scrG}_{\hbar}$ of $C^{\infty}$
diffeomorphisms 
\[
g:\mathbb{D}^{(2d+d')}(\hbar^{1/2-2\theta})\to g(\mathbb{D}^{(2d+d')}(\hbar^{1/2-2\theta}))\subset\real^{2d+d'}
\]
such that every $g\in\widetilde{\scrG}_{\hbar}$ satisfies }
\begin{description}
\item [{\textit{(G0)}}] \textit{$g$ has a skew product structure with
respect to the projection $p:\real_{(x,s)}^{2d+d'}\to\real_{x}^{2d}$,
that is, we may write $g$ as 
\[
g(x,s)=(\check{g}(x),\hat{g}(x,s))\in\real^{2d}\oplus\real^{d'}
\]
 for $(x,s)\in\mathbb{D}^{(2d+d')}(\hbar^{1/2-2\theta})\subset\real^{2d}\oplus\real^{d'}$,}
\item [{\textit{(G1)}}] \textit{$\check{g}$ is symplectic with respect
to the symplectic form $\omega$ on $\real^{2d}$ in (\ref{eq:Symplectic_form_on_Euclidean_space}),}
\item [{\textit{(G2)}}] \textit{$\check{g}(0)=0$, $|\hat{g}(0,0)|<C\hbar^{1/2+\theta}$
and $\|Dg(0)-\mathrm{Id}\|<C\max\{\hbar^{\beta(1/2-\theta)},\hbar^{(1-\beta)(1/2-\theta)+2\theta}\}$,
and}
\item [{\textit{(G3)}}] \textit{$|\partial^{\alpha}g|<C_{\alpha}\cdot\hbar^{-((1-\beta)(1/2-\theta)+2\theta)(|\alpha|-1)}$
for any multi-index $\alpha$ with $|\alpha|\ge2$,}
\end{description}
\textit{where $ $$C$ and $C_{\alpha}$ are positive constants that
do not depend on $\hbar$ nor $g\in\widetilde{\scrG}_{\hbar}$.}%
\end{minipage}}

\bigskip{}

\begin{rem}
\label{Rem:Setting2ext}Condition (G2) and (G3) in the setting above
is weaker than the literal translation of those in ``Setting II''
in Subsection \ref{ss:nonlinear}. Still we can get the proofs of
the propositions below by translating those of the corresponding propositions,
though we have to check that these weaker conditions are sufficient
to get the conclusions. The point is that the diffeomorphisms $g$
get closer to the identity in the $C^{\infty}$ sense as $\hbar\to+0$,
provided we look them in the scale $\hbar^{1/2}.$ See Remark \ref{Rem:modification}
below also.
\end{rem}
For $g\in\widetilde{\scrG}_{\hbar}$, we consider the Euclidean prequantum
transfer operator 
\begin{equation}
\widetilde{\prequantumL}_{g}:C_{0}^{\infty}(\mathbb{D}^{(2d+d')}(\hbar^{1/2-2\theta}))\to C_{0}^{\infty}(g(\mathbb{D}^{(2d+d')}(\hbar^{1/2-2\theta})))\label{eq:Lg-1}
\end{equation}
defined by 
\[
\widetilde{\prequantumL}_{g}\, u(x,s)=\frac{1}{|\det(Dg|_{\ker Dp})|}\cdot e^{-(i2\pi/\hbar)\cdot\mathcal{A}_{g}(p(g^{-1}(x,s)))}\cdot u(g^{-1}(x,s))
\]
with $\mathcal{A}_{g}$ be the function defined by (\ref{eq:A_g})
replacing $g$ by $\check{g}$. Let
\begin{equation}
\widetilde{\chi}_{\hbar}:\real_{(x,s)}^{2d+d'}\to[0,1],\quad\widetilde{\chi}_{\hbar}(x,s)=\chi(\hbar^{-1/2+\theta}|(x,s)|/2)\label{eq:tilde_chi_hbar}
\end{equation}
where $\chi$ is a $C^{\infty}$ function satisfying (\ref{eq:def_chi}).
The next two propositions correspond to Proposition \ref{lm:L_g_Y_almost_identity}
and \ref{pp:bdd_g} respectively.
\begin{prop}
\label{lm:L_g_Y_almost_identity-1} There exist constants $C>0$ and
$\epsilon>0$ such that, for any $\hbar>0$ and $g\in\widetilde{\scrG}_{\hbar}$,
we have
\[
\|\mathcal{\widetilde{Y}}_{\hbar}\circ(\widetilde{\prequantumL}_{g}-\mathrm{Id})\circ\multiplication(\widetilde{\chi}_{\hbar})\|_{\mathcal{\widetilde{H}}_{\hbar}^{r}(\real^{2d+d'})}<C\hbar^{\epsilon}\quad\mbox{and}\quad\left\Vert (\widetilde{\prequantumL}_{g}-\mathrm{Id})\circ\mathcal{\widetilde{Y}}_{\hbar}\circ\multiplication(\widetilde{\chi}_{\hbar})\right\Vert _{\mathcal{\widetilde{H}}_{\hbar}^{r}(\real^{2d+d'})}<C\hbar^{\epsilon}.
\]
\end{prop}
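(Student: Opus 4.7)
The plan is to follow the proof of Proposition \ref{lm:L_g_Y_almost_identity} essentially verbatim, but with careful attention to the weaker regularity assumptions on $g\in\widetilde{\scrG}_\hbar$ imposed in Setting II$^{\mathrm{ext}}$ (see Remark \ref{Rem:Setting2ext}). Concretely, first I would use Corollary \ref{cor:BargmannP_bdd} (or rather its straightforward extension to $L^2(\real^{2d+d'}\oplus\real^{2d+d'},(\widetilde{\mathcal{W}}_\hbar^r)^2)$, which follows from the analog of Lemma \ref{lm:boundedness_of_molifier2} proved from the estimate \eqref{eq:estmate_on_Wr} for $\widetilde{\mathcal{W}}_\hbar^r$) to reduce both claims to proving a kernel estimate
\[
\left\Vert \multiplication(\widetilde{Y}_\hbar)\circ\Bargmann_{(x,s)}\circ(\widetilde{\prequantumL}_g-\mathrm{Id})\circ\multiplication(\widetilde{\chi}_\hbar)\circ\Bargmann_{(x,s)}^*\right\Vert \le C\hbar^\epsilon
\]
on $L^2(\real^{2d+d'}\oplus\real^{2d+d'},(\widetilde{\mathcal{W}}_\hbar^r)^2)$, and a symmetric statement for the other ordering.

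Next I would write this operator as an integral operator with kernel $K(x,s,\xi_x,\xi_s;x',s',\xi'_x,\xi'_s)$, expressed as a Gaussian integral involving the wave packets $\phi_{(x,s),(\xi_x,\xi_s)}$ coming from $\Bargmann_{(x,s)}=\Bargmann_x\otimes\Bargmann_s$, multiplied by the factor
\[
k=e^{(i/\hbar)\mathcal{A}_g(p(g^{-1}(y,t)))}\cdot|\det(Dg|_{\ker Dp})|^{-1}\cdot e^{\text{(quadratic in }g\text{)}}-1.
\]
On the support of $\widetilde{Y}_\hbar$ one has $|(\xi_x,\xi_s)|\le 2\hbar^{1/2-2\theta}$ and on the support of $\widetilde{\chi}_\hbar$ one has $|(x,s)|\le 2\hbar^{1/2-\theta}$, so using the conditions (G1)--(G3) and Lemma \ref{lm:g} (applied to $\check g$, for which (G1)--(G3) of the original Setting II hold in the $x$-variable) together with the bound $|\hat g(0,0)|\le C\hbar^{1/2+\theta}$, one checks that $k$ and its derivatives are of size $\hbar^\epsilon$ times factors that are controlled by the phase-space variables. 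Applying integration by parts with the operator $L=(1-i(\xi-\xi')\partial_y)/(1+\hbar^{-1}|\xi-\xi'|^2)$ sufficiently many times yields the bound
\[
|K(x,s,\xi_x,\xi_s;x',s',\xi_x',\xi_s')|\le C_\nu\cdot\hbar^\epsilon\cdot\langle\hbar^{-1/2}|((x,s),(\xi_x,\xi_s))-((x',s'),(\xi_x',\xi_s'))|\rangle^{-\nu}
\]
on the relevant support, for arbitrarily large $\nu$. The analog of Lemma \ref{lm:boundedness_of_molifier2} for $\widetilde{\mathcal{W}}_\hbar^r$ then gives the desired estimate.

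The main obstacle I expect is verifying that the weakened conditions (G2) and (G3)---in particular, the factors $\hbar^{-((1-\beta)(1/2-\theta)+2\theta)(|\alpha|-1)}$ appearing in higher derivatives of $g$---do not destroy the $\hbar^\epsilon$ gain. Each integration by parts in the variable $y=(x,s)$ produces a factor of $\hbar^{-1/2}(\xi-\xi')$ which is $O(\hbar^{-2\theta})$ on $\mathrm{supp}\,\widetilde{Y}_\hbar$, and each derivative of $k$ hitting $g$ costs at most $\hbar^{-((1-\beta)(1/2-\theta)+2\theta)}$. Our choice $\theta<\beta/20$ in \eqref{eq:choiceTheteExt} is sharper than \eqref{eq:cond_theta} precisely so that these combined losses are dominated by the gain coming from the factor $k$ itself, which is of order $\hbar^{\min(3(1/2-\theta),\,\beta(1/2-\theta),\,1/2+\theta)}$ (reflecting the bounds $|\mathcal{A}_g|=O(|x|^3)$, $\|D\check g-\mathrm{Id}\|=O(\hbar^{\beta(1/2-\theta)})$ and $|\hat g(0,0)|=O(\hbar^{1/2+\theta})$). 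A careful bookkeeping verifies the existence of a small positive $\epsilon$, completing the proof; the symmetric claim with the operators in the opposite order is handled in the parallel manner.
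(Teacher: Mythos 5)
Your proposal is correct and follows exactly the approach the paper itself indicates: the paper gives no written proof of this proposition, deferring entirely to Remarks \ref{Rem:Setting2ext} and \ref{Rem:modification}, which instruct the reader to translate the proof of Proposition \ref{lm:L_g_Y_almost_identity} and to check that the weaker conditions (G2), (G3) of Setting II$^{\mathrm{ext}}$ still suffice because $\theta<\beta/20$ gives $-(1-\beta)(1/2-\theta)-2\theta>-1/2+2\theta$. Your sketch carries out precisely this translation (reduction to the lifted kernel, integration by parts, exponent comparison governed by the choice of $\theta$); a couple of the intermediate bookkeeping phrases are imprecise (the $\langle\hbar^{-1/2}|\xi-\xi'|\rangle$ factor from $L$ is a decay gain rather than a ``loss of $\hbar^{-2\theta}$,'' and $\check g$ need not satisfy the original (G3) literally, only the weaker extended version), but they do not change the structure or the conclusion.
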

\begin{rem}
\label{Rem:modification} Concerning Remark \ref{Rem:Setting2ext},
we have to check that the argument in the proof of Proposition \ref{lm:L_g_Y_almost_identity}
works under the weaker assumption in the setting $\mathrm{II}{}^{\mathrm{ext}}$.
This is not difficult if we take the last comment in Remark \ref{Rem:Setting2ext}
into account and noting that, from (\ref{eq:choiceTheteExt}), we
have
\[
-(1-\beta)(1/2-\theta)-2\theta>-1/2+2\theta
\]
for the exponent that appeared in the condition (G3) in the setting
$\mathrm{II}{}^{\mathrm{ext}}$. 
\[
\]
\end{rem}
\begin{prop}
\label{pp:bdd_g-1}For any $g\in\widetilde{\scrG}_{\hbar}$, we have
\begin{equation}
\left\Vert \widetilde{\prequantumL}_{g}\circ\multiplication(\widetilde{\chi}_{\hbar})\right\Vert _{\mathcal{\widetilde{H}}_{\hbar}^{r,+}(\real^{2d+d'})\to\mathcal{\widetilde{H}}_{\hbar}^{r}(\real^{2d+d'})}\le C_{0}\quad\mbox{and }\quad\left\Vert \widetilde{\prequantumL}_{g}\circ\multiplication(\widetilde{\chi}_{\hbar})\right\Vert _{\mathcal{\widetilde{H}}_{\hbar}^{r}(\real^{2d+d'})\to\mathcal{\widetilde{H}}_{\hbar}^{r,-}(\real^{2d+d'})}\le C_{0}\label{eq:inequality-1}
\end{equation}
for sufficiently small $\hbar>0$, where $C_{0}>1$ is a constant
that depends only on $n$, $r$, $d$, $\theta$ and the choice of
the escape functions $W$ and $W^{\pm}$ in subsection \ref{ss:wL2}. 
\end{prop}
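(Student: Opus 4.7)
The plan is to adapt the proof of Proposition~\ref{pp:bdd_g} to the Grassmanian-extended setting. Throughout, we work with the lifted operator $\Bargmann_{(x,s)}\circ\widetilde{\prequantumL}_{g}\circ\multiplication(\widetilde{\chi}_{\hbar})\circ\Bargmann_{(x,s)}^{*}$, and reduce the boundedness claim (\ref{eq:inequality-1}) to the estimate
\[
\|\multiplication(1-\widetilde{Y}_{\hbar})\circ\Bargmann_{(x,s)}\circ\widetilde{\prequantumL}_{g}\circ\multiplication(\widetilde{\chi}_{\hbar})\circ\Bargmann_{(x,s)}^{*}\|_{L^{2}(\widetilde{\mathcal{W}}_{\hbar}^{r,+})\to L^{2}(\widetilde{\mathcal{W}}_{\hbar}^{r})}\le C_{0}.
\]
Indeed, after decomposing $\mathrm{Id}=\multiplication(\widetilde{Y}_{\hbar})+\multiplication(1-\widetilde{Y}_{\hbar})$ on the left and on the right of $\Bargmann_{(x,s)}\circ\widetilde{\prequantumL}_{g}\circ\multiplication(\widetilde{\chi}_{\hbar})\circ\Bargmann_{(x,s)}^{*}$, the three cross terms where at least one of the two cutoffs is $\multiplication(\widetilde{Y}_{\hbar})$ are controlled by Proposition~\ref{lm:L_g_Y_almost_identity-1}, Lemma~\ref{lem:boundedness_of_calY-1} and Corollary~\ref{lm:XM_exchange-1} (exactly as in Section~\ref{ss:nonlinear}). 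The symmetric inequality for $\mathcal{\widetilde{H}}_{\hbar}^{r}\to\mathcal{\widetilde{H}}_{\hbar}^{r,-}$ is proved by the same reasoning.

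The remaining ``hard'' piece involving $\multiplication(1-\widetilde{Y}_{\hbar})$ is handled by a Littlewood--Paley decomposition adapted to the anisotropic weight $\widetilde{\mathcal{W}}_{\hbar}^{r}(x,s,\xi_{x},\xi_{s})=W_{\hbar}^{r}((\zeta_{p},\xi_{s}),(\zeta_{q},s))$. Following the argument written at the end of the proof of Proposition~\ref{pp:bdd_g}, I will construct two partitions of unity $\{\psi_{n,\hbar}\}$ and $\{\psi_{n,\hbar}^{+}\}$ supported in annuli and cones in the $(d+d')$-dimensional variable $(\zeta_{p},\xi_{s})$ (replacing $\zeta$ in Section~\ref{ss:nonlinear}), with the $+$-version having slightly enlarged cones. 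Using these partitions, one bounds the weighted $L^{2}$ norm of the output by $\sum_{n}2^{2rn}\|\psi_{n,\hbar}\cdot\Bargmann_{(x,s)}\widetilde{\prequantumL}_{g}\multiplication(\widetilde{\chi}_{\hbar})\Bargmann_{(x,s)}^{*}(\psi_{n',\hbar}^{+}\cdot u)\|_{L^{2}}^{2}$. Each summand is then estimated in two regimes: (a) near-diagonal terms (with $|n-n'|\le 3$ and $nn'\ge 0$) are controlled by unitarity of the prequantum transfer operator on $L^{2}$, and (b) far-from-diagonal terms decay as $C_{\nu}2^{-\nu\max\{n,n'\}}$ after integration by parts, using that the canonical lift of $g$ separates the supports of $\psi_{n,\hbar}$ and $\psi_{n',\hbar}^{+}$ by a distance $\gtrsim\hbar^{1/2}2^{\max\{n,n'\}}$.

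The main obstacle is verifying the crucial separation estimate of step (b) under the weak regularity assumption~(G3) in Setting~II$^{\mathrm{ext}}$: the higher derivatives of $g$ may blow up as $\hbar\to 0$ at rate $\hbar^{-((1-\beta)(1/2-\theta)+2\theta)(|\alpha|-1)}$, so that integration by parts produces factors one must balance against the Gaussian decay from the Bargmann kernel. Here I will use both the skew-product structure (G0), which guarantees that the lift of $g$ to the cotangent bundle preserves the splitting $(\xi_{x},\xi_{s})$ block-triangularly and hence respects the ``unstable cone'' defined by $(\zeta_{p},\xi_{s})$, together with the choice of $\theta$ in~(\ref{eq:choiceTheteExt}) which makes $-(1-\beta)(1/2-\theta)-2\theta>-1/2+2\theta$; this latter inequality ensures that every derivative loss $\hbar^{-((1-\beta)(1/2-\theta)+2\theta)}$ incurred by one integration by parts is dominated by the $\hbar^{1/2-2\theta}$ gain coming from $|\xi|\gtrsim\hbar^{1/2-2\theta}$ on the support of $1-\widetilde{Y}_{\hbar}$, so that arbitrarily many integrations by parts remain profitable. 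Once the separation/derivative estimate is established, the Schur test on the resulting kernel yields the $C_{0}$-bound, completing the proof in the same formal way as Proposition~\ref{pp:bdd_g}.
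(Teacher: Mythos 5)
Your proposal takes essentially the same route as the paper, which defers the proof of Proposition~\ref{pp:bdd_g-1} to a translation of the proof of Proposition~\ref{pp:bdd_g}. You correctly identify the Littlewood--Paley structure, and, more importantly, you isolate the genuine technical content of the translation: under the weaker regularity condition~(G3) of Setting~II$^{\mathrm{ext}}$, the $\hbar$-dependent derivative bounds on $g$ must be balanced against the gains from integration by parts, and the inequality $-(1-\beta)(1/2-\theta)-2\theta>-1/2+2\theta$ coming from~(\ref{eq:choiceTheteExt}) is exactly the point flagged by Remark~\ref{Rem:modification}.

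There are, however, two imprecisions worth fixing. First, your reduction to the hard piece via a \emph{two-sided} insertion of $\mathrm{Id}=\multiplication(\widetilde Y_{\hbar})+\multiplication(1-\widetilde Y_{\hbar})$ does not work cleanly: the term with $\multiplication(1-\widetilde Y_{\hbar})$ on the left and $\multiplication(\widetilde Y_{\hbar})$ on the right is not controlled by the lemmas you cite. Those lemmas always place $\widetilde{\mathcal Y}_{\hbar}$ in a specific order relative to $(\widetilde{\prequantumL}_{g}-\mathrm{Id})$ and $\multiplication(\widetilde\chi_{\hbar})$; reordering produces a commutator term of the form $(\widetilde{\prequantumL}_{g}-\mathrm{Id})\circ[\multiplication(\widetilde\chi_{\hbar}),\widetilde{\mathcal Y}_{\hbar}]$, and bounding this requires a bound on $(\widetilde{\prequantumL}_{g}-\mathrm{Id})\circ\multiplication(\widetilde\chi_{\hbar})$, which is precisely what one is trying to prove. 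The reduction in the proof of Proposition~\ref{pp:bdd_g} is \emph{one-sided}: insert $Y_{\hbar}+(1-Y_{\hbar})$ only on the left, and inside the $Y_{\hbar}$ piece further split $\widetilde{\prequantumL}_{g}=(\widetilde{\prequantumL}_{g}-\mathrm{Id})+\mathrm{Id}$; the hard piece is then the full one-sided operator $\multiplication(1-\widetilde Y_{\hbar})\circ\Bargmann_{(x,s)}\circ\widetilde{\prequantumL}_{g}\circ\multiplication(\widetilde\chi_{\hbar})\circ\Bargmann_{(x,s)}^{*}$, which is what you wrote as the target but not what your decomposition leaves over. Second, the Littlewood--Paley partitions should live on the full $2(d+d')$-dimensional variable $\bigl((\zeta_{p},\xi_{s}),(\zeta_{q},s)\bigr)$, not just on the $(d+d')$-dimensional $(\zeta_{p},\xi_{s})$: the annuli are in $|\bigl((\zeta_{p},\xi_{s}),(\zeta_{q},s)\bigr)|$ and the cones discriminate between the $(\zeta_{p},\xi_{s})$-dominated and the $(\zeta_{q},s)$-dominated regions, exactly as $\zeta=(\zeta_{p},\zeta_{q})$ does in the proof of Proposition~\ref{pp:bdd_g}. (Also, ``unitarity'' of $\widetilde{\prequantumL}_{g}$ on $L^{2}$ is a loose phrasing --- the Jacobian factor $|\det(Dg|_{\ker Dp})|^{-1}$ makes the operator only $L^{2}$-bounded with norm close to $1$, not an isometry --- but this is harmless since (G2) keeps $Dg$ close to the identity.)
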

The following correspond to Lemma \ref{lm:L_g_almost_identity}, Corollary
\ref{cor:L_g_almost_commutes_with_t^k} and Lemma \ref{lm:L_g_almost_identity-Tr}.
\begin{lem}
\label{lm:L_g_almost_identity-1} There exist constants $C>0$ and
$\epsilon>0$ independent of $\hbar$ such that the following holds:
Let $\psi\in\widetilde{\scrX}_{\hbar}$ be supported on the disk $\mathbb{D}^{(2d+d')}(2\hbar^{1/2-\theta})$
and let $g\in\widetilde{\scrG}_{\hbar}$, $0\le k\le n$, then it
holds 
\[
\left\Vert (\widetilde{\prequantumL}_{g}-\mathrm{Id)}\circ\multiplication(\psi)\circ\romet_{\hbar}^{(k)}\right\Vert _{\mathcal{\widetilde{H}}_{\hbar}^{r}(\real^{2d+d'})}\le C\hbar^{\epsilon}
\]
 and 
\[
\left\Vert \romet_{\hbar}^{(k)}\circ(\widetilde{\prequantumL}_{g}-\mathrm{Id})\circ\multiplication(\psi)\right\Vert _{\mathcal{\widetilde{H}}_{\hbar}^{r}(\real^{2d+d'})}\le C\hbar^{\epsilon}.
\]
\end{lem}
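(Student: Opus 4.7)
The plan is to mimic the strategy used in the proof of Lemma \ref{lm:L_g_almost_identity}, replacing every ingredient by its Grassmannian extension established in the preceding subsection. Since $\psi \in \widetilde{\scrX}_\hbar$ is supported on $\mathbb{D}^{(2d+d')}(2\hbar^{1/2-\theta})$, the truncation function $\widetilde\chi_\hbar$ in \eqref{eq:tilde_chi_hbar} satisfies $\widetilde\chi_\hbar \cdot \psi = \psi$ for all small $\hbar$, so I may freely insert $\multiplication(\widetilde\chi_\hbar)$ in front of $\multiplication(\psi)$. Throughout, $O(\hbar^\epsilon)$ denotes a term whose operator norm in the appropriate space is bounded by $C\hbar^\epsilon$ for some $\epsilon > 0$ and $C$ independent of $\hbar$, $\psi$ and $g$.

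First I would decompose
\[
(\widetilde{\prequantumL}_g - \mathrm{Id}) \circ \multiplication(\psi) \circ \romet_\hbar^{(k)}
= (\widetilde{\prequantumL}_g\circ\multiplication(\widetilde\chi_\hbar) - \mathrm{Id}) \circ \widetilde{\mathcal Y}_\hbar \circ \multiplication(\psi) \circ \romet_\hbar^{(k)}
+ (\widetilde{\prequantumL}_g\circ\multiplication(\widetilde\chi_\hbar) - \mathrm{Id}) \circ (\mathrm{Id} - \widetilde{\mathcal Y}_\hbar) \circ \multiplication(\psi) \circ \romet_\hbar^{(k)}.
\]
For the second term, Lemma \ref{YT_exchange-1} shows that $(\mathrm{Id}-\widetilde{\mathcal Y}_\hbar)\circ \multiplication(\psi)\circ \romet_\hbar^{(k)}$ has operator norm $O(\hbar^\theta)$ as an operator from $\widetilde{\mathcal H}_\hbar^{r,-}(\real^{2d+d'})$ to $\widetilde{\mathcal H}_\hbar^{r,+}(\real^{2d+d'})$, while Proposition \ref{pp:bdd_g-1} says that $\widetilde{\prequantumL}_g\circ\multiplication(\widetilde\chi_\hbar)$ and $\mathrm{Id}$ are bounded from $\widetilde{\mathcal H}_\hbar^{r,+}$ to $\widetilde{\mathcal H}_\hbar^r$ uniformly in $\hbar$, so this piece is $O(\hbar^\theta)$ in $\widetilde{\mathcal H}_\hbar^r$.

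Next I would split the first term further as
\[
(\widetilde{\prequantumL}_g - \mathrm{Id}) \circ \multiplication(\psi) \circ \widetilde{\mathcal Y}_\hbar \circ \romet_\hbar^{(k)}
+ (\widetilde{\prequantumL}_g\circ\multiplication(\widetilde\chi_\hbar) - \mathrm{Id})\circ [\widetilde{\mathcal Y}_\hbar, \multiplication(\psi)] \circ \romet_\hbar^{(k)}.
\]
The commutator $[\widetilde{\mathcal Y}_\hbar, \multiplication(\psi)]$ is $O(\hbar^\theta)$ in $\widetilde{\mathcal H}_\hbar^r$ by Lemma \ref{lem:boundedness_of_calY-1}, and $\romet_\hbar^{(k)}$ is bounded on $\widetilde{\mathcal H}_\hbar^r$ by Proposition \ref{prop:prequatum_op_for_hyp_linear-1}, so again Proposition \ref{pp:bdd_g-1} controls the remaining factor and yields $O(\hbar^\theta)$. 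For the first of the two pieces, I would interchange $\widetilde{\mathcal Y}_\hbar$ and $\multiplication(\psi)$ once more (incurring another $O(\hbar^\theta)$ commutator error via Lemma \ref{lem:boundedness_of_calY-1}) and then apply Proposition \ref{lm:L_g_Y_almost_identity-1}, which gives $\|(\widetilde{\prequantumL}_g - \mathrm{Id})\circ \widetilde{\mathcal Y}_\hbar \circ \multiplication(\widetilde\chi_\hbar)\|_{\widetilde{\mathcal H}_\hbar^r} = O(\hbar^\epsilon)$. Since $\romet_\hbar^{(k)}$ is uniformly bounded on $\widetilde{\mathcal H}_\hbar^r$, we conclude that this contribution is also $O(\hbar^\epsilon)$, completing the proof of the first inequality.

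The second inequality $\|\romet_\hbar^{(k)} \circ (\widetilde{\prequantumL}_g - \mathrm{Id})\circ\multiplication(\psi)\|_{\widetilde{\mathcal H}_\hbar^r} \le C\hbar^\epsilon$ is proved by the symmetric decomposition, using the second halves of Lemma \ref{YT_exchange-1} and Proposition \ref{lm:L_g_Y_almost_identity-1}. The only step that requires genuine care — and where I expect the main obstacle to lie — is the application of Proposition \ref{lm:L_g_Y_almost_identity-1}: one must verify that the weakened conditions (G2) and (G3) in Setting II$^{\mathrm{ext}}$ (noted in Remark \ref{Rem:modification}) still supply enough control on $\widetilde{\prequantumL}_g - \mathrm{Id}$ when one commutes $\widetilde{\mathcal Y}_\hbar$ through and applies the kernel-estimate machinery used in the Euclidean setting. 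Once Proposition \ref{lm:L_g_Y_almost_identity-1} is in hand, the rest of the argument is a purely formal assembly of the extended lemmas, fully parallel to the proof of Lemma \ref{lm:L_g_almost_identity}.
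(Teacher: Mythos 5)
Your decomposition is exactly the one the paper uses for the Euclidean Lemma \ref{lm:L_g_almost_identity} (which the paper then declares is to be translated verbatim to the Grassmannian case), and all the right replacement lemmas are invoked, so the overall strategy is correct. One small point to tighten: in the estimate of the commutator term $(\widetilde{\prequantumL}_g\circ\multiplication(\widetilde\chi_\hbar)-\mathrm{Id})\circ[\widetilde{\mathcal{Y}}_\hbar,\multiplication(\psi)]\circ\romet_\hbar^{(k)}$, the operator $\widetilde{\prequantumL}_g\circ\multiplication(\widetilde\chi_\hbar)$ is \emph{not} bounded on $\widetilde{\mathcal{H}}_\hbar^r$ itself (Proposition \ref{pp:bdd_g-1} only gives it as a map $\widetilde{\mathcal{H}}_\hbar^{r,+}\to\widetilde{\mathcal{H}}_\hbar^r$), so one must route through the $\pm$-weighted spaces — i.e.\ use the Grassmannian analogue of Corollary \ref{lm:pik_rpm} to get $\romet_\hbar^{(k)}:\widetilde{\mathcal{H}}_\hbar^{r,-}\to\widetilde{\mathcal{H}}_\hbar^{r,+}$, the $\pm$-version of the commutator bound in Lemma \ref{lem:boundedness_of_calY-1}, and only then apply Proposition \ref{pp:bdd_g-1}; this is precisely why the paper's proof of the Euclidean version also cites Corollary \ref{lm:pik_rpm}.
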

\begin{cor}
\label{cor:L_g_almost_commutes_with_t^k-1}There exist constants $C>0$
and $\epsilon>0$ independent of $\hbar$ such that, for any $\psi\in\widetilde{\scrX}_{\hbar}$
and $g\in\widetilde{\scrG}_{\hbar}$ , it holds 
\[
\left\Vert [\widetilde{\prequantumL}_{g}\circ\multiplication(\psi),\,\romet_{\hbar}^{(k)}]\right\Vert _{\mathcal{\widetilde{H}}_{\hbar}^{r}(\real^{2d+d'})}\le C\hbar^{\epsilon}\quad\mbox{for }0\le k\le n
\]
 and also 
\[
\left\Vert [\widetilde{\prequantumL}_{g}\circ\multiplication(\psi),\,\tilde{\romet}_{\hbar}]\right\Vert _{\widetilde{\mathcal{H}}_{\hbar}^{r}(\real^{2d+d'})}\le C\hbar^{\epsilon}.
\]
\end{cor}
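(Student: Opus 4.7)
The plan is to derive both bounds from a purely algebraic expansion of the commutator, then feed in the two preceding results: Lemma \ref{lm:L_g_almost_identity-1} (which says that $\widetilde{\prequantumL}_g$ acts nearly as the identity on the range of $\romet_\hbar^{(k)}$ truncated by $\multiplication(\psi)$) and Lemma \ref{cor:XT_exchange-2} (which says that $\romet_\hbar^{(k)}$ nearly commutes with $\multiplication(\psi)$, at the cost of a slight anisotropy loss in the escape-function weight).

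First, for the case $0\le k\le n$, I would write the commutator in the telescoped form
\[
[\widetilde{\prequantumL}_g\circ\multiplication(\psi),\,\romet_\hbar^{(k)}]
=(\widetilde{\prequantumL}_g-\mathrm{Id})\circ\multiplication(\psi)\circ\romet_\hbar^{(k)}
-\romet_\hbar^{(k)}\circ(\widetilde{\prequantumL}_g-\mathrm{Id})\circ\multiplication(\psi)
+[\multiplication(\psi),\,\romet_\hbar^{(k)}],
\]
obtained by inserting $\widetilde{\prequantumL}_g=\mathrm{Id}+(\widetilde{\prequantumL}_g-\mathrm{Id})$ on both sides of $\multiplication(\psi)$. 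The first two terms on the right are bounded in operator norm on $\widetilde{\mathcal H}_\hbar^r(\real^{2d+d'})$ by $C\hbar^{\epsilon}$ directly by the two inequalities in Lemma \ref{lm:L_g_almost_identity-1}. For the third term, Lemma \ref{cor:XT_exchange-2} gives a bound of order $\hbar^{\theta}$ as an operator from $\widetilde{\mathcal H}_\hbar^{r,-}$ to $\widetilde{\mathcal H}_\hbar^{r,+}$; combining this with the pointwise inequalities $\widetilde{W}_\hbar^{r,-}\le\widetilde{W}_\hbar^r\le\widetilde{W}_\hbar^{r,+}$ (the extended analogue of (\ref{eq:wrpm})) upgrades it to a bound of the same order on $\widetilde{\mathcal H}_\hbar^r$. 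Taking the new $\epsilon$ to be $\min\{\epsilon,\theta\}$ absorbs both estimates and gives the first claim.

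For the second claim involving $\tilde{\romet}_\hbar$, I would use the defining relation $\tilde{\romet}_\hbar=\mathrm{Id}-\sum_{k=0}^n\romet_\hbar^{(k)}$. Since every operator commutes with the identity, the commutator with $\tilde{\romet}_\hbar$ reduces to $-\sum_{k=0}^n[\widetilde{\prequantumL}_g\circ\multiplication(\psi),\,\romet_\hbar^{(k)}]$, which is a fixed finite sum of $n+1$ terms each bounded by the first claim. Absorbing the factor $n+1$ into $C$ finishes the proof.

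I do not expect a real obstacle here: the genuine analytic work has already been carried out in Lemma \ref{lm:L_g_almost_identity-1} and Lemma \ref{cor:XT_exchange-2}, and the current statement is essentially a bookkeeping corollary. The only mild technical point to verify is that the operator norm from $\widetilde{\mathcal H}_\hbar^{r,-}$ to $\widetilde{\mathcal H}_\hbar^{r,+}$ dominates the operator norm on $\widetilde{\mathcal H}_\hbar^r$, which is immediate from the inclusions of weighted $L^2$ spaces induced by $\widetilde{W}_\hbar^{r,-}\le\widetilde{W}_\hbar^r\le\widetilde{W}_\hbar^{r,+}$.
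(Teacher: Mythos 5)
Your proof is correct and follows essentially the same route as the paper: the paper proves the non-extended version by exactly the same telescoping decomposition (using Lemma \ref{lm:L_g_almost_identity} and Lemma \ref{cor:XT_exchange-1}) and then the relation $\tilde{\romet}_{\hbar}=\mathrm{Id}-\sum_{k=0}^{n}\romet_{\hbar}^{(k)}$, and declares that the extended version is obtained by translating this argument. Your additional remark that the $\widetilde{\mathcal H}_\hbar^{r,-}\to\widetilde{\mathcal H}_\hbar^{r,+}$ bound from Lemma \ref{cor:XT_exchange-2} dominates the $\widetilde{\mathcal H}_\hbar^{r}$ operator norm (via $\widetilde{W}_\hbar^{r,-}\le\widetilde{W}_\hbar^{r}\le\widetilde{W}_\hbar^{r,+}$) is the right observation to fill the small gap left because the extended analogue of Lemma \ref{cor:XT_exchange-1} is not explicitly stated in the paper.
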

\begin{lem}
\label{lm:L_g_almost_identity-Tr-1} There exist constants $\epsilon>0$
and $C>0$, independent of $\hbar$, such that the following holds
true: Let $\psi\in\widetilde{\scrX}_{\hbar}$ and let $g\in\scrG_{\hbar}$,
$0\le k\le n$, then it holds 
\[
\left\Vert (\widetilde{\prequantumL}_{g}-\mathrm{Id)}\circ\multiplication(\psi)\circ\romet_{\hbar}^{(k)}:\mathcal{\widetilde{H}}_{\hbar}^{r}(\real^{2d})\to\mathcal{\widetilde{H}}_{\hbar}^{r}(\real^{2d})\right\Vert _{Tr}\le C\hbar^{-\theta d+\epsilon}
\]
 
\[
\left\Vert \romet_{\hbar}^{(k)}\circ(\widetilde{\prequantumL}_{g}-\mathrm{Id})\circ\multiplication(\psi):\mathcal{\widetilde{H}}_{\hbar}^{r}(\real^{2d})\to\mathcal{\widetilde{H}}_{\hbar}^{r}(\real^{2d})\right\Vert _{Tr}\le C\hbar^{-\theta d+\epsilon}
\]
and 
\[
\left\Vert [\widetilde{\prequantumL}_{g}\circ\multiplication(\psi),\,\romet_{\hbar}^{(k)}]:\mathcal{\widetilde{H}}_{\hbar}^{r}(\real^{2d})\to\mathcal{\widetilde{H}}_{\hbar}^{r}(\real^{2d})\right\Vert _{Tr}\le C\hbar^{-\theta d+\epsilon}
\]

\end{lem}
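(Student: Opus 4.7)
The plan is to follow the template of the proof of Lemma \ref{lm:L_g_almost_identity-Tr} in the non-extended setting, substituting each ingredient by its extended counterpart developed in this section. Since $\psi \in \widetilde{\scrX}_\hbar$ is supported on $\mathbb{D}^{(2d+d')}(2\hbar^{1/2-\theta})$, on which $\widetilde{\chi}_\hbar \equiv 1$, we have $\psi \cdot \widetilde{\chi}_\hbar = \psi$. Using $\romet_\hbar^{(k)} \circ \romet_\hbar^{(k)} = \romet_\hbar^{(k)}$, I would first decompose
\[
(\widetilde{\prequantumL}_g - \mathrm{Id}) \circ \multiplication(\psi) \circ \romet_\hbar^{(k)} = \bigl((\widetilde{\prequantumL}_g - \mathrm{Id}) \circ \multiplication(\psi) \circ \romet_\hbar^{(k)}\bigr) \circ \bigl(\romet_\hbar^{(k)} \circ \multiplication(\widetilde{\chi}_\hbar)\bigr) + \bigl((\widetilde{\prequantumL}_g - \mathrm{Id}) \circ \multiplication(\psi)\bigr) \circ [\multiplication(\widetilde{\chi}_\hbar), \romet_\hbar^{(k)}].
\]
In the first summand, the left factor is bounded in operator norm by $C\hbar^\epsilon$ via Lemma \ref{lm:L_g_almost_identity-1}, while the right factor is bounded in trace norm of order $\hbar^{-\theta d}$ times a positive power of $\hbar$ via Lemma \ref{lm:trace_basic-1} applied to $\widetilde{\chi}_\hbar$. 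For the second summand, Proposition \ref{pp:bdd_g-1} bounds $(\widetilde{\prequantumL}_g - \mathrm{Id}) \circ \multiplication(\psi)$ in operator norm on the relevant anisotropic Sobolev spaces, and Corollary \ref{cor:XT_exchange-Tr-1} bounds the trace norm of the commutator $[\multiplication(\widetilde{\chi}_\hbar), \romet_\hbar^{(k)}]$ by $C\hbar^{-\theta d + \theta}$; together these give the claimed estimate $C\hbar^{-\theta d + \epsilon}$.

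The second inequality is obtained in the symmetric way, writing
\[
\romet_\hbar^{(k)} \circ (\widetilde{\prequantumL}_g - \mathrm{Id}) \circ \multiplication(\psi) = \bigl(\multiplication(\widetilde{\chi}_\hbar) \circ \romet_\hbar^{(k)}\bigr) \circ \bigl(\romet_\hbar^{(k)} \circ (\widetilde{\prequantumL}_g - \mathrm{Id}) \circ \multiplication(\psi)\bigr) + [\romet_\hbar^{(k)}, \multiplication(\widetilde{\chi}_\hbar)] \circ \bigl((\widetilde{\prequantumL}_g - \mathrm{Id}) \circ \multiplication(\psi)\bigr)
\]
and applying the same ingredients (now using the right-composition versions of Lemma \ref{lm:L_g_almost_identity-1} and the second bound in Proposition \ref{pp:bdd_g-1}). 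For the third (commutator) estimate, I would use the algebraic identity
\[
[\widetilde{\prequantumL}_g \circ \multiplication(\psi), \romet_\hbar^{(k)}] = (\widetilde{\prequantumL}_g - \mathrm{Id}) \circ \multiplication(\psi) \circ \romet_\hbar^{(k)} - \romet_\hbar^{(k)} \circ (\widetilde{\prequantumL}_g - \mathrm{Id}) \circ \multiplication(\psi) + [\multiplication(\psi), \romet_\hbar^{(k)}],
\]
combining the two inequalities just proven with the trace-norm bound on $[\multiplication(\psi), \romet_\hbar^{(k)}]$ given by Corollary \ref{cor:XT_exchange-Tr-1}.

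The main obstacle is not the present lemma itself — its proof is essentially algebraic once the ingredients are in place — but rather verifying that the extended technical ingredients invoked above (in particular Lemma \ref{lm:L_g_almost_identity-1}, Lemma \ref{lm:trace_basic-1}, Proposition \ref{pp:bdd_g-1} and Corollary \ref{cor:XT_exchange-Tr-1}) retain the same power-of-$\hbar$ exponents as in the non-extended setting, despite the weakened regularity (G2)--(G3) in Setting II$^{\mathrm{ext}}$. As noted in Remark \ref{Rem:modification}, this reduces to bookkeeping of the exponents of $\hbar$ during the integration-by-parts estimates on the Schwartz kernels, where the skew-product structure (G0) and the specific choice of $\theta < \beta/20$ in (\ref{eq:choiceTheteExt}) are essential to absorb the worse derivative losses coming from the H\"older singularity in the fiber direction of the Grassmanian bundle.
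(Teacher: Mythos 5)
Your proof is correct and follows essentially the same route as the paper: the paper treats Lemma~\ref{lm:L_g_almost_identity-Tr-1} as a formal translation of Lemma~\ref{lm:L_g_almost_identity-Tr}, and you reproduce exactly that decomposition (inserting $\romet_\hbar^{(k)}\circ\multiplication(\widetilde{\chi}_\hbar)$ via idempotency plus a commutator term) together with the extended counterparts Lemma~\ref{lm:L_g_almost_identity-1}, Lemma~\ref{lm:trace_basic-1}, Proposition~\ref{pp:bdd_g-1} and Corollary~\ref{cor:XT_exchange-Tr-1}. The only addition is that you spell out the commutator identity behind the third inequality and flag the $\hbar$-bookkeeping caveat from Remark~\ref{Rem:modification}, which is a helpful but inessential elaboration of what the paper leaves implicit.
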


\subsection{Proof of the main theorems in the setting of Grassmanian extension. }

Now we give the proofs of the main theorems, namely Theorem \ref{thm:Discrete-spectrum-F_G},
Theorem \ref{thm:band_structure-1} and Theorem \ref{cor:The-special-choice_V0}
in the extended situation. (Theorem \ref{thm:weyl_law_extended} will
be proved in the next subsection.) The point in the following argument
is in the choice of the local coordinate charts and a few basic estimates
on the extended map $f_{G}$ viewed in them. The modifications in
the rest part will be rather obvious.

Recall the set of points 
\[
\mathscr{P}_{\hbar}=\left\{ m_{i}\in M\mid\,1\le i\le I_{\hbar}\right\} ,
\]
local coordinate charts
\[
\kappa_{i}=\kappa_{i,\hbar}:\mathbb{D}\left(c\right)\subset\mathbb{R}^{2d}\rightarrow M,\quad1\le i\le I_{\hbar}
\]
and the sections $\tau_{i}:U_{i}\to P$ taken in Proposition \ref{prop:Local-charts}.
Also recall that we write $E_{u}$ for the image of the section $E_{u}:M\to G$,
which assigns the unstable subspace to each point and is H\"older
continuous with exponent $\beta$. This is an attracting subset for
$f_{G}$. (See (\ref{eq:E_u_K_n}).) Let $\widetilde{m}_{i}:=E_{u}(m_{i})\in K_{0}$
and set 
\[
\mathscr{\widetilde{P}}_{\hbar}=\left\{ \widetilde{m}_{i}\in G\mid\,1\le i\le I_{\hbar}\right\} .
\]
In the next proposition, we take local coordinate charts $\widetilde{\kappa}_{i}=\widetilde{\kappa}_{i,\hbar}$
on a small neighborhood of the point $\widetilde{m}_{i}$ as an extension
of $\kappa_{i}=\kappa_{i,\hbar}$. Notice that the local coordinate
charts $\widetilde{\kappa}_{i}=\widetilde{\kappa}_{i,\hbar}$ are
far from being conformal as the parameter $\hbar$ gets smaller. That
is, $D\widetilde{\kappa}_{i}|_{\real^{2d}\oplus\{0\}}$ is nearly
isometry while $D\widetilde{\kappa}_{i}|_{\{0\}\oplus\real^{d'}}$
is a conformal expansion by the rate $\hbar^{-(1-\beta)(1/2-\theta)-2\theta}$.
(See Condition (3) in the proposition below and remember Condition
(2) in Proposition \ref{prop:Local-charts}.) This is necessary when
we deal with the problems caused by non-smoothness of the attracting
set $E_{u}$. 
\begin{prop}
\textbf{\label{prop:Local-charts-1}} For each $\hbar=\frac{1}{2\pi N}>0$,
there exist a system of local coordinate charts 
\begin{equation}
\widetilde{\kappa}_{i}=\widetilde{\kappa}_{i,\hbar}:\mathbb{D}^{(2d)}\left(c\right)\times\mathbb{D}^{(d')}(c\cdot\hbar^{(1-\beta)(1/2-\theta)+2\theta})\subset\mathbb{R}^{2d+d'}\rightarrow G,\quad1\le i\le I_{\hbar}\label{eq:local_coordinate_extended}
\end{equation}
on a neighborhood of $E_{u}$ with $c>0$ a constant independent of
$\hbar$, so that the following conditions hold for sufficiently small
$\hbar$:\end{prop}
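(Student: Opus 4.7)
The strategy is to extend the $M$-charts $\kappa_i$ of Proposition \ref{prop:Local-charts} into $G$-charts $\widetilde{\kappa}_i$ by adjoining fiber coordinates adapted to the attractor $E_u$, with the chart size in the fiber direction shrunk by the factor $\hbar^{(1-\beta)(1/2-\theta)+2\theta}$. This shrinking factor is dictated by matching the H\"older fluctuation of size $\hbar^{\beta(1/2-\theta)}$ of $E_u$ over $U_i$ against the rescaling that normalizes all coordinate directions to the common scale $\hbar^{1/2-\theta}$ when the charts are eventually plugged into Settings I$^{\mathrm{ext}}$ and II$^{\mathrm{ext}}$ (Subsection \ref{sub:nonlinear_extended}).

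Concretely, I would first fix a smooth Ehresmann connection on $p:G\to M$ (for instance the one induced by the Levi--Civita connection on $TM$) together with a smooth chart $\varphi_{m_i}:\mathbb{D}^{(d')}(c)\to G_{m_i}$ centered at $E_u(m_i)$. Horizontally lifting $\varphi_{m_i}$ along the rays of $\kappa_i$ yields smooth coordinates $\widehat{\kappa}_i:\mathbb{D}^{(2d)}(c)\times\mathbb{D}^{(d')}(c)\to G$ with $\widehat{\kappa}_i(x,0)$ equal to the parallel transport of $E_u(m_i)$ above $\kappa_i(x)$. The desired $\widetilde{\kappa}_i$ is then just the restriction of $\widehat{\kappa}_i$ to $\mathbb{D}^{(2d)}(c)\times\mathbb{D}^{(d')}(c\cdot\hbar^{(1-\beta)(1/2-\theta)+2\theta})$. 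The base-related properties of the proposition --- the location of $\widetilde{\kappa}_i(0)=\widetilde{m}_i$, the behavior of $(D\widetilde{\kappa}_i)_0$ on the splitting of $\real^{2d+d'}$, the covering property, the overlap multiplicity bound analogous to (\ref{eq:bound_on_i_sim_j}), and the associated partition of unity --- all transfer immediately from the corresponding properties of $\kappa_i$ and of the smooth reference sections $\widehat\kappa_i(\cdot,0)$.

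The substantive step is to verify the analogue of Proposition \ref{prop:Local-charts}(5): after composition with a suitable affine isometry $\widetilde{A}_{j,i}\in\mathcal{A}$, the transition maps $\widetilde{\kappa}_{j,i}=\widetilde{\kappa}_j^{-1}\circ\widetilde{\kappa}_i$ and the local representatives $\widetilde{f}_{G,j,i}=\widetilde{\kappa}_j^{-1}\circ f_G\circ\widetilde{\kappa}_i$ belong to $\widetilde{\scrG}_\hbar$ of Setting II$^{\mathrm{ext}}$. The skew-product condition (G0) is inherited from the skew-product structure of $f_G$ over $f$, and the symplecticity (G1) of the base factor $\check{g}$ follows from the symplecticity of $f$. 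The hard part will be to check (G2) and (G3): the bound $|\hat{g}(0,0)|<C\hbar^{1/2+\theta}$ captures, after vertical rescaling, the $\beta$-H\"older deviation of $E_u$ at $f(m_i)$ from the smooth horizontal reference $\widehat{\kappa}_j(\cdot,0)$; the derivative bound $\|Dg(0)-\mathrm{Id}\|<C\max\{\hbar^{\beta(1/2-\theta)},\,\hbar^{(1-\beta)(1/2-\theta)+2\theta}\}$ and the higher-derivative estimates (G3) encode the same phenomenon term by term. The choice of exponents has been calibrated precisely for these three quantities --- $\beta(1/2-\theta)$, $(1-\beta)(1/2-\theta)+2\theta$, and $1/2+\theta$ --- to fit together under the standing constraint $0<\theta<\beta/20$ from (\ref{eq:choiceTheteExt}); verifying this interlocking bookkeeping is the only real obstacle, and once it is established the conclusion of the proposition follows from the already proven Proposition \ref{prop:Local-charts} for the base charts $\kappa_i$.
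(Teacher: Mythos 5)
Your definition of the chart as ``just the restriction of $\widehat{\kappa}_i$ to $\mathbb{D}^{(2d)}(c)\times\mathbb{D}^{(d')}(c\cdot\hbar^{(1-\beta)(1/2-\theta)+2\theta})$'' is incorrect and cannot satisfy the conclusion of the proposition: restricting the domain of a smooth map does not change its derivative at an interior point, so your $\widetilde{\kappa}_i$ would have $(D\widetilde{\kappa}_i)_0|_{\{0\}\oplus\real^{d'}}$ isometric, contradicting condition (2), which demands a conformal expansion by the rate $\hbar^{-(1-\beta)(1/2-\theta)-2\theta}$. For the same reason, the image $\widetilde{\kappa}_i(\mathbb{D}^{(2d+d')}(\hbar^{1/2-\theta}))$ would extend only $\sim\hbar^{1/2-\theta}$ into the fibers instead of the required $\hbar^{\beta(1/2-\theta)-2\theta}$, so condition (3) fails as well. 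The correct definition is $\widetilde{\kappa}_i(x,s):=\widehat{\kappa}_{\tilde m_i}\bigl(x,\hbar^{-(1-\beta)(1/2-\theta)-2\theta}s\bigr)$; the anisotropic blow-up in the fiber direction is precisely what makes the H\"older section $E_u$ look ``flat'' in the new coordinates, and your later phrase ``after vertical rescaling'' shows you had this mechanism in mind, but it never made it into the construction.

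Separately, the verification of (G2) and (G3) is the actual content of the proof, not bookkeeping that can be deferred. The paper writes $\hat\kappa_j^{-1}\circ\hat\kappa_i$ in the block form (\ref{eq:kappa_in_coordinates}) with a lower-triangular linear part plus a remainder $K$ vanishing to second order at $0$, extracts from $U_i\cap U_j\neq\emptyset$ and the $\beta$-H\"older continuity of $E_u$ the crucial block estimates ($|s_0|<C\hbar^{\beta(1/2-\theta)}$, $\|A_{1,2}\|,\|A_{2,1}\|,\|A_{1,1}-I\|,\|A_{2,2}-I\|<C\hbar^{\beta(1/2-\theta)}$, $\|A_{3,3}-I'\|<C\hbar^{1/2-\theta}$), and then applies the vertical rescaling to obtain (\ref{eq:tildekappa_in_coordinates}), from which (G2) is read off directly and (G3) follows by writing $K'(x,s)=\bigl(K_1(x),\hbar^{(1-\beta)(1/2-\theta)+2\theta}K_2(x,\hbar^{-(1-\beta)(1/2-\theta)-2\theta}s)\bigr)$ and differentiating. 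Until you carry out this computation you have not shown that the three exponents interlock under the constraint $\theta<\beta/20$, and the assertion that they do is exactly what is at stake.
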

\begin{enumerate}
\item $\widetilde{\kappa}_{i}\left(0\right)=\widetilde{m}_{i}$ and the
following diagram commutes 
\[
\begin{CD}\mathbb{D}^{(2d)}\left(c\right)\times\mathbb{D}^{(d')}(c\cdot h^{(1-\beta)(1/2-\theta)+2\theta})@>{\widetilde{\kappa}_{i}}>>G\\
@V{p}VV@V{p}VV\\
\mathbb{D}^{(2d)}\left(c\right)@>{\kappa_{i}}>>M
\end{CD}
\]

\item The derivative $(D\widetilde{\kappa}_{i})_{0}:T_{0}\real^{2d+d'}=\real^{2d+d'}\to T_{\tilde{m}_{i}}G$
maps the subspaces $\{0\}\oplus\real^{d'}$ and $\real^{2d}\oplus\{0\}$
respectively to the subspace $\ker\,(Dp)_{\tilde{m}_{i}}\subset T_{\tilde{m}_{i}}G$
and its orthogonal complement respectively. The restriction of $(D\widetilde{\kappa}_{i})_{0}$
to the subspace $\{0\}\oplus\real^{d'}$ is a conformal expansion
by the rate $ $$\hbar^{-(1-\beta)(1/2-\theta)-2\theta}$ with respect
to the Euclidean metric in the source and the Riemann metric on $G$
in the target. Further, the map $\widetilde{\kappa}_{i}\circ(D\widetilde{\kappa}_{i})_{0}^{-1}$
is not far from the exponential map in the sense that 
\[
\|\exp_{\tilde{m}_{i}}^{-1}\circ\widetilde{\kappa}_{i}\circ(D\widetilde{\kappa}_{i})_{0}^{-1}:\mathbb{D}(\tilde{m}_{i},c)\to T_{\tilde{m}_{i}}G\|_{C^{k}}\le C_{k}
\]
with $C_{k}$ a constant independent of $\hbar$ nor $1\le i\le I_{\hbar}$,
where $\mathbb{D}(\tilde{m}_{i},c)$ denotes the disk in $T_{\tilde{m}_{i}}G$
with radius $c$ and center at the origin. 
\item The union of the images $\widetilde{U}_{i}:=\widetilde{\kappa}_{i}(\mathbb{D}^{(2d+d')}(\hbar^{1/2-\theta}))$,
$1\le i\le I_{\hbar}$, cover the $\hbar^{\beta(1/2-\theta)-2\theta}/2$-neighborhood
of the section $E_{u}$ and contained in the $2\hbar^{\beta(1/2-\theta)-2\theta}$-neighborhood
of the section $E_{u}$. 
\item For the coordinate change transformation $\widetilde{\kappa}_{j,i}:=\widetilde{\kappa}_{j}^{-1}\circ\widetilde{\kappa}_{i}$,
defined for $1\le i,j\le I_{\hbar}$ with $\widetilde{U}_{i}\cap\widetilde{U}_{j}\neq\emptyset$,
there exists an isometric affine map $\widetilde{A}_{j,i}:\mathbb{R}^{2d+d'}\rightarrow\mathbb{R}^{2d+d'}$
of the form $\widetilde{A}_{j,i}=A_{j,i}\oplus\widehat{A}_{j,i}$
with $A_{j,i}$ given in Proposition \ref{prop:Local-charts} and
$\widehat{A}_{j,i}:\real^{d'}\to\real^{d'}$ an isometric linear map,
such that, if we set $\mathcal{G}_{\hbar}=\{\widetilde{A}_{j,i}\circ\widetilde{\kappa}_{j,i}\}$,
it satisfies the conditions (G0)-(G3) in the Setting II$^{\mathrm{ext}}$
in Subsection \ref{sub:nonlinear_extended}. 
\item There exists a family of $C^{\infty}$ functions $\{\widetilde{\psi}_{i}:\real^{2d+d'}\to[0,1]\}_{i=1}^{I_{\hbar}}$
which is supported on the disk $\mathbb{D}^{(2d+d')}(\hbar^{1/2-\theta})$,
such that 
\begin{equation}
\sum_{i=1}^{I_{\hbar}}\widetilde{\psi}_{i}\circ\widetilde{\kappa}_{i}^{-1}\equiv1\mbox{ on the \ensuremath{(\hbar^{\beta(1/2-\theta)-2\theta}/4)}-neighborhood of the section \ensuremath{E_{u}}}.\label{eq:partition_of_unity-1}
\end{equation}
The set of functions $\mathcal{X}_{\hbar}=\{\widetilde{\psi}_{i}\}$
satisfies the conditions, (C1) and (C2), in the Setting I$^{\mathrm{ext}}$
in Subsection \ref{sub:nonlinear_extended}. \end{enumerate}
\begin{lyxcode}
~\end{lyxcode}
\begin{proof}
We continue the argument in the proof of Proposition \ref{prop:Local-charts},
in which we defined the local chart $\kappa_{m}:\mathbb{D}(c)\to M$
for each point $m\in M$ as a composition of a linear map from $\real^{2d}$
to $T_{m}M$ and a modification of the exponential mapping. By a parallel
construction, we can and do take a local chart $\hat{\kappa}_{m}:\mathbb{D}(c)\to G$
for each point $m\in M$ so that the following conditions hold:
\begin{itemize}
\item $\hat{\kappa}_{m}(0)=\tilde{m}:=E_{u}(m)\in K_{0}$ and we have $p\circ\hat{\kappa}_{m}=\kappa_{m}\circ p$,
\item The derivative $(D\hat{\kappa}_{m})_{0}:T_{0}\real^{2d+d'}=\real^{2d+d'}\to T_{\tilde{m}}G$
maps the subspaces $\{0\}\oplus\real^{d'}$ and $\real^{2d}\oplus\{0\}$
to the subspace $\ker\,(Dp)_{\tilde{m}}\subset T_{\tilde{m}}G$ and
its orthogonal complement respectively. The restriction of $(D\hat{\kappa}_{\tilde{m}})_{0}$
to the subspace $\{0\}\oplus\real^{d'}$ is an isometric linear map
with respect to the Euclidean metric in the source and the Riemann
metric on $G$ in the target. Further, the map $\hat{\kappa}_{\tilde{m}}\circ(D\hat{\kappa}_{\tilde{m}})_{0}^{-1}$
is not far from the exponential map in the sense that 
\[
\|\exp_{\tilde{m}}^{-1}\circ\hat{\kappa}_{\tilde{m}}\circ(D\hat{\kappa}_{\tilde{m}})_{0}^{-1}:\mathbb{D}(\tilde{m},c)\to T_{\tilde{m}}G\|_{C^{k}}\le C_{k}
\]
with $C_{k}$ a constant independent of $\hbar$ nor $1\le i\le I_{\hbar}$.
\end{itemize}
We then define the local charts $\widetilde{\kappa}_{i}$ in the statement
of the proposition by 
\[
\widetilde{\kappa}_{i}(x,s):=\widetilde{\kappa}_{\tilde{m}{}_{i}}(x,h^{-(1-\beta)(1/2-\theta)-2\theta}s).
\]
It is not difficult to check that the coordinate maps $\widetilde{\kappa}_{i}$
thus defined satisfies the required conditions. The conditions (1),
(2) and (3) should be obvious. Also so should be (5) once we check
the conditions (1)-(4). Thus we check the condition (4). The condition
(G0) and (G1) are direct consequences of the construction. To check
the condition (G2), we first observe that the diffeomorphism $\hat{\kappa}_{j}^{-1}\circ\hat{\kappa}_{i}$
is written in the form
\begin{equation}
\hat{\kappa}_{j}^{-1}\circ\hat{\kappa}_{i}\begin{pmatrix}q\\
p\\
s
\end{pmatrix}=\begin{pmatrix}q_{0}\\
p_{0}\\
s_{0}
\end{pmatrix}+\begin{pmatrix}A_{1,1} & A_{1,2} & 0\\
A_{2,1} & A_{2,2} & 0\\
A_{3,1} & A_{3,2} & A_{3,3}
\end{pmatrix}\begin{pmatrix}q\\
p\\
s
\end{pmatrix}+K(q,p,s)\label{eq:kappa_in_coordinates}
\end{equation}
where $K(q,p,s)$ satisfies $K(0,0,0)=0$ and $DK(0,0,0)=0$. From
the assumption $U_{i}\cap U_{j}\neq\emptyset$ and the choice of $\beta$,
we have
\begin{itemize}
\item $|q_{0}|<C\hbar^{1/2-\theta}$, $|p_{0}|<C\hbar^{1/2-\theta}$, $|s_{0}|<C\hbar^{\beta(1/2-\theta)}$,
\item $\|A_{1,2}\|<C\hbar^{\beta(1/2-\theta)}$ , $\|A_{2,1}\|<C\hbar^{\beta(1/2-\theta)}$,
and
\item $\|A_{1,1}-I\|<C\hbar^{\beta(1/2-\theta)}$, $\|A_{2,2}-I\|<C\hbar^{\beta(1/2-\theta)}$,
$\|A_{3,3}-I'\|<C\hbar^{(1/2-\theta)}$ for some orthogonal transformations
$I,I':\real^{d}\to\real^{d}$.
\end{itemize}
From the definition, the diffeomorphism $\widetilde{\kappa}_{j,i}:=\widetilde{\kappa}_{j}^{-1}\circ\widetilde{\kappa}_{i}$
is then written
\begin{align}
\widetilde{\kappa}_{j,i}\begin{pmatrix}q\\
p\\
s
\end{pmatrix} & =\begin{pmatrix}q_{0}\\
p_{0}\\
h^{(1-\beta)(1/2-\theta)+2\theta}s_{0}
\end{pmatrix}\nonumber \\
 & \qquad+\begin{pmatrix}A_{1,1} & A_{1,2} & 0\\
A_{2,1} & A_{2,2} & 0\\
h^{(1-\beta)(1/2-\theta)+2\theta}A_{3,1} & h^{(1-\beta)(1/2-\theta)+2\theta}A_{3,2} & A_{3,3}
\end{pmatrix}\begin{pmatrix}q\\
p\\
s
\end{pmatrix}+K'(q,p,s)\label{eq:tildekappa_in_coordinates}
\end{align}
where $K'(q,p,s)$ satisfies $K'(0,0,0)=0$ and $DK'(0,0,0)=0$. From
this expression and the estimates above, we find the affine map $\widetilde{A}_{j,i}$
as claimed in (4) so that the condition (G2) holds true. To prove
the condition (G3), we express the term $K(q,p,s)$ in (\ref{eq:kappa_in_coordinates})
as
\[
K\begin{pmatrix}x\\
s
\end{pmatrix}=\begin{pmatrix}K_{1}(x)\\
K_{2}(x,s)
\end{pmatrix}
\]
and the differentials of $K_{1}(x)$ and $K_{2}(x,s)$ are uniformly
bounded with respect to $1\le i,j\le I_{\hbar}$ and $\hbar$. Then
$K'(q,p,s)$ in (\ref{eq:tildekappa_in_coordinates}) is written
\[
K'\begin{pmatrix}x\\
s
\end{pmatrix}=\begin{pmatrix}K_{1}(x)\\
h^{(1-\beta)(1/2-\theta)+2\theta}\cdot K_{2}(x,h^{-(1-\beta)(1/2-\theta)-2\theta}s)
\end{pmatrix}.
\]
The condition (G3) follows immediately from this expression. 
\end{proof}
We define a $C^{\infty}$ section $\widetilde{\tau}_{i}:\widetilde{U}_{i}\to P_{G}$
of the $\mathbf{U}(1)$-bundle $P_{G}$ as the pull-back of $\tau_{i}$
by the projection $p$: 
\[
\widetilde{\tau}_{i}(z)=\tau_{i}\circ p(z).
\]
Once we have set up the local coordinates and local sections as above,
we can follow the argument in Section \ref{sec:Proofs-of-the_mains}
by confirming the correspondence between the objects and statements.
The following few paragraphs we follow the argument in Subsection
\ref{sub:The-prequantum-transfer_decomposed} with obvious modifications.
\begin{defn}
Set 
\[
\mathcal{\widetilde{E}}_{\hbar}:=\bigoplus_{i=1}^{I_{\hbar}}C_{0}^{\infty}(\mathbb{D}^{2d+d'}(\hbar^{1/2-\theta})).
\]
and let $\widetilde{\boldI}_{\hbar}:C_{N}^{\infty}\left(P_{G}\right)\to\mathcal{\widetilde{E}}_{\hbar}$
be the operator that assign to each equivariant function $u\in C_{N}^{\infty}\left(P_{G}\right)$
a set of functions $\widetilde{\boldI}_{\hbar}(u)=\left(u_{i}\right)_{i=1}^{I_{\hbar}}\in\mathcal{\widetilde{E}}_{\hbar}$
on local charts defined by the relation
\begin{equation}
u_{i}\left(x,s\right)=\widetilde{\psi}_{i}\left(x,s\right)\cdot u\left(\widetilde{\tau}_{i}\left(\widetilde{\kappa}_{i}\left(x,s\right)\right)\right)\quad\mbox{for }1\le i\le I_{\hbar}.\label{eq:local_data_ui-1}
\end{equation}
Let $\widetilde{\boldI}_{\hbar}^{*}:\bigoplus_{i=1}^{I_{\hbar}}\mathcal{S}(\real^{2d+d'})\to C_{N}^{\infty}\left(P_{G}\right)$
be the operator defined by
\begin{equation}
\left(\widetilde{\boldI}_{\hbar}^{*}\left((u_{i})_{i=1}^{I_{\hbar}}\right)\right)(p)=\sum_{i=1}^{I_{\hbar}}e^{i2\pi N\cdot\alpha_{i}(p)}\cdot\widetilde{\chi}_{\hbar}(x,s)\cdot u_{i}(x,s)\label{eq:expression_of_I*-1}
\end{equation}
where $\widetilde{\chi}_{\hbar}$ is the function defined in (\ref{eq:tilde_chi_hbar}),
$(x,s)=\kappa_{i}^{-1}(\pi(p))$ and $\alpha_{i}(p)$ is the real
number such that $p=e^{i2\pi\alpha_{i}(p)}\cdot\tilde{\tau}_{i}(\pi(p))$.
Then we have 
\begin{equation}
\widetilde{\boldI}_{\hbar}^{*}\circ\widetilde{\boldI}_{\hbar}u=u\quad\mbox{for \ensuremath{u\in C_{N}^{\infty}\left(P_{G}\right)}}\label{eq:I*_I_is_Identity-1}
\end{equation}
and $\widetilde{\boldI}_{\hbar}\circ\widetilde{\boldI}_{\hbar}^{*}:\mathcal{E}_{\hbar}\rightarrow\mathcal{E}_{\hbar}$
is a projection onto the image of $\widetilde{\boldI}_{\hbar}$. We
define the \emph{lift of the prequantum transfer operator} $\hat{F}_{\hbar}$
with respect to $\widetilde{\boldI}_{\hbar}$ by 
\begin{equation}
\widetilde{\boldF}_{\hbar}:=\widetilde{\boldI}_{\hbar}\circ\hat{F}_{N}\circ\widetilde{\boldI}_{\hbar}^{*}\;:\;\bigoplus_{i=1}^{I_{\hbar}}\mathcal{S}(\real^{2d+d'})\rightarrow\widetilde{\mathcal{E}}_{\hbar}\subset\bigoplus_{i=1}^{I_{\hbar}}\mathcal{S}(\real^{2d+d'}).\label{eq:Lifted_operator_F-1}
\end{equation}
\end{defn}
\begin{prop}
\label{prop:bfF}The operator $\widetilde{\boldF}_{\hbar}$ can be
written as 
\[
\widetilde{\boldF}_{\hbar}((v_{i})_{i\in I_{\hbar}})=\left(\sum_{i=1}^{I_{\hbar}}\widetilde{\boldF}_{j,i}(v_{i})\right)_{j\in I_{\hbar}}
\]
where the component 
\[
\mathbf{\widetilde{F}}_{j,i}:\mathcal{S}(\real^{2d+d'})\rightarrow C_{0}^{\infty}(\mathbb{D}^{(2d+d')}(\hbar^{1/2-\theta}))
\]
is defined by $\mathbf{\widetilde{F}}_{j,i}\equiv0$ if $i\not\to j$
and, otherwise, by 
\[
\mathbf{\widetilde{F}}_{j,i}(v_{i})=\mathcal{\widetilde{L}}_{\tilde{f}_{j,i}}\left(e^{\widetilde{V}\circ f_{G}\circ\widetilde{\kappa}_{i}}\cdot\widetilde{\psi}_{j,i}\cdot\widetilde{\chi}_{\hbar}\cdot v_{i}\right)
\]
where we set
\begin{align}
\tilde{f}_{j,i} & :=\widetilde{\kappa}_{j}^{-1}\circ f_{G}\circ\widetilde{\kappa}_{i},\label{eq:f_ji-1}\\
\widetilde{\psi}_{j,i} & :=\widetilde{\psi}_{j}\circ\tilde{f}_{j,i}\label{eq:psi_ji-1}
\end{align}
and $\mathcal{\widetilde{L}}_{\tilde{f}_{j,i}}$ is the Euclidean
prequantum transfer operator defined in (\ref{eq:Lg-1}) for $g=\tilde{f}_{j,i}$. 
\end{prop}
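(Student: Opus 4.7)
The plan is to follow the same strategy used to establish the analogous decomposition in Proposition \ref{prop:Local-expression-of_FN} for the un-extended operator $\boldF_{\hbar}$, but now carefully tracking the extra factor $|\det(Df_{G}|_{\ker Dp})|^{-1}$ that appears in the definition \eqref{eq:def_transfert_gutzwiller} of $\widetilde{F}$, and the fact that the potential $\widetilde{V}$ lives on $G$ rather than on $M$. First I would unwind the definition
$\widetilde{\boldF}_{\hbar} = \widetilde{\boldI}_{\hbar}\circ \widetilde{F}_{N}\circ \widetilde{\boldI}_{\hbar}^{*}$
componentwise: a general element $(v_i)\in \bigoplus_i\mathcal{S}(\real^{2d+d'})$ is sent by $\widetilde{\boldI}_{\hbar}^{*}$ to the equivariant function on $P_G$ given by \eqref{eq:expression_of_I*-1}, the operator $\widetilde{F}_{N}$ acts on it by composition with $\widetilde{f}_G^{-1}$ together with multiplication by $e^{\widetilde{V}}\cdot |\det(Df_G|_{\ker Dp})|^{-1}$, and finally $\widetilde{\boldI}_{\hbar}$ reads off the resulting local data through the sections $\widetilde{\tau}_j$ and the cutoffs $\widetilde\psi_j$.

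Next I would isolate the $(j,i)$-component. The support property (3) of Proposition \ref{prop:Local-charts-1} together with the compact support of $\widetilde\chi_{\hbar}$ and $\widetilde\psi_j$ forces the component $\mathbf{\widetilde F}_{j,i}$ to vanish unless the two charts are linked by the dynamics, i.e.\ unless $i\to j$ (in the sense $f(U_i)\cap U_j\neq\emptyset$, which implies $f_G(\widetilde U_i)\cap\widetilde U_j\neq \emptyset$ because the charts on $G$ are neighbourhoods of the $E_u$-section, which is $f_G$-invariant). So the only nontrivial computation is, for $i\to j$, to evaluate $\widetilde\psi_j(x,s)\cdot (\widetilde F_N(\widetilde{\boldI}_{\hbar}^{*}(v_i)))(\widetilde\tau_j(\widetilde\kappa_j(x,s)))$ when only the $i$-th summand of $\widetilde{\boldI}_{\hbar}^{*}$ contributes.

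For this I would use the local expression \eqref{eq:f_tilde_local} of the prequantum lift of $f$ lifted to $P_G$ via $\widetilde{\tau}_j = \tau_j\circ p$, to convert the composition $\widetilde{f}_G^{-1}\circ \widetilde{\tau}_j$ back into $\widetilde\tau_i$ with an explicit $e^{i2\pi\mathcal{A}}$-phase. Since the phase factor $e^{iN\alpha_i}$ built into $\widetilde{\boldI}_{\hbar}^{*}$ was inserted precisely so that the resulting phases cancel against this action (and against the ones absorbed by $\widetilde{\tau}_j$ after applying $\widetilde{\boldI}_{\hbar}$), the whole computation collapses to
\[
\mathbf{\widetilde F}_{j,i}(v_i)(x,s) = \frac{e^{\widetilde V\circ f_G\circ \widetilde\kappa_i(\widetilde f_{j,i}^{-1}(x,s))}}{|\det(Df_G|_{\ker Dp})(\widetilde f_{j,i}^{-1}(x,s))|}\cdot \widetilde\psi_{j,i}(\widetilde f_{j,i}^{-1}(x,s))\cdot \widetilde\chi_{\hbar}(\widetilde f_{j,i}^{-1}(x,s))\cdot v_i(\widetilde f_{j,i}^{-1}(x,s)),
\]
where $\widetilde f_{j,i}$ and $\widetilde\psi_{j,i}$ are as in \eqref{eq:f_ji-1}--\eqref{eq:psi_ji-1}. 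Recognising the right-hand side as $\mathcal{\widetilde L}_{\widetilde f_{j,i}}$ applied to the product $e^{\widetilde V\circ f_G\circ \widetilde\kappa_i}\cdot \widetilde\psi_{j,i}\cdot \widetilde\chi_{\hbar}\cdot v_i$ completes the formula.

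The only real subtlety, and hence the main obstacle, is bookkeeping the phase and Jacobian factors consistently. In particular one has to verify that the action function $\mathcal{A}_{\widetilde f_{j,i}}$ arising from the Euclidean prequantum operator $\mathcal{\widetilde L}_{\widetilde f_{j,i}}$ (which depends only on the symplectic part $\check{\widetilde f}_{j,i} = f_{j,i}$ by the computation of Lemma \ref{lm:g} and by Proposition \ref{prop:expression_of_prequantum_op_on_Rn-1}, since $\eta$ has no $ds$-component in the extended coordinates) coincides with the action produced by the composition of local lifts $\widetilde\tau_i\to \widetilde\tau_j$ computed via \eqref{eq:def_A_action}, up to a global phase that is harmless once absorbed into the section $\widetilde\tau_j$. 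Once this identification is checked on one fixed base point in each overlap and then transported by the formulas of \S\ref{sub:Semiclassical-description-of-prequantum_op}, the proposition follows exactly as in the proof of Proposition \ref{prop:Local-expression-of_FN}.
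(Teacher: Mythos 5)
Your proposal follows essentially the same approach as the paper: the proof of the un-extended analogue in Subsection \ref{sub:The-prequantum-transfer_decomposed} is just "unwind $\boldI_{\hbar}\circ\hat F_N\circ\boldI_{\hbar}^{*}$ componentwise using the local expression of the transfer operator," and the paper gives no separate proof for Proposition \ref{prop:bfF}, taking it to be the obvious Grassmanian analogue. Your unwinding, the support argument for $\mathbf{\widetilde F}_{j,i}\equiv 0$ when $i\not\to j$, and the identification of the potential and Jacobian factors are all in line with that.

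One thing you should fix is the displayed formula and the sentence surrounding it. You write that "the whole computation collapses to" a formula with no phase factor and then claim to recognise the right-hand side as $\widetilde{\mathcal L}_{\tilde f_{j,i}}$ applied to $e^{\widetilde V\circ f_G\circ\widetilde\kappa_i}\cdot\widetilde\psi_{j,i}\cdot\widetilde\chi_{\hbar}\cdot v_i$. But $\widetilde{\mathcal L}_{\tilde f_{j,i}}$ as defined in (\ref{eq:Lg-1}) \emph{produces} the oscillatory phase $e^{-(i2\pi/\hbar)\,\mathcal A_{\tilde f_{j,i}}(p(\tilde f_{j,i}^{-1}(x,s)))}$, so the right-hand side as you wrote it is not $\widetilde{\mathcal L}_{\tilde f_{j,i}}$ applied to anything, and the phases coming from $e^{iN\alpha_i}$ in $\widetilde{\boldI}_{\hbar}^{*}$ and the sections $\widetilde\tau_j$ do \emph{not} cancel: they combine precisely to reproduce $\mathcal A_{\tilde f_{j,i}}$. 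Your final paragraph says exactly this (the action from the Euclidean model must match the action from the local lifts), which is the correct statement, so the only issue is that the middle paragraph contradicts it. If you insert the phase factor in the displayed formula and drop the claim that the phases "cancel," the argument is sound and matches the paper. You should also make explicit that the agreement of $|\det(D\tilde f_{j,i}|_{\ker Dp})|$ with $|\det(Df_G|_{\ker Dp})|$ uses condition (2) of Proposition \ref{prop:Local-charts-1}: both $\widetilde\kappa_i$ and $\widetilde\kappa_j$ are conformal expansions by the same rate in the vertical direction, so the chart Jacobians cancel.
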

We define 
\[
\widetilde{V}_{j}=\max\{\widetilde{V}(m)\mid m\in\widetilde{U}_{j}\}\quad\mbox{for }1\le j\le I_{\hbar}.
\]
The next lemma corresponds to Lemma \ref{lem:multipliacation_by_Psi_ji}. 
\begin{lem}
\label{lem:multipliacation_by_Psi_ji-1}If we set 
\[
\mathscr{\widetilde{X}_{\hbar}=}\{\widetilde{\psi}_{j,i}\cdot\widetilde{\chi}_{\hbar}\mid1\le i,j\le I_{\hbar},\; i\to j\},
\]
it satisfies the conditions (C1) and (C2) in Setting$I^{\mathrm{ext}}$
in Subsection \ref{sub:nonlinear_extended}. (The constants $C$ and
$C_{\alpha}$ will depend on $f_{G}$ and $\widetilde{V}$ though
not on $\hbar$.) For $1\le i,j\le I_{\hbar}$ such that $i\to j$,
we have 
\[
\left\Vert \multiplication(e^{\widetilde{V}\circ f_{G}\circ\widetilde{\kappa}_{i}}\cdot\widetilde{\psi}_{j,i}\cdot\widetilde{\chi}_{\hbar})-e^{\widetilde{V}_{j}}\cdot\multiplication(\widetilde{\psi}_{j,i}\cdot\widetilde{\chi}_{\hbar})\right\Vert _{\mathcal{\widetilde{H}}_{\hbar}^{r}(\real^{2d+d'})}\le C(f_{G},\widetilde{V})\cdot\hbar^{\theta}
\]
for some constant $C(f_{G},\widetilde{V})$ independent of $\hbar$.
The same statement holds for the set of functions
\[
\mathscr{\widetilde{X}_{\hbar}=}\{e^{\widetilde{V}\circ f_{G}\circ\widetilde{\kappa}_{i}}\cdot\widetilde{\psi}_{j,i}\cdot\widetilde{\chi}_{\hbar}\mid1\le i,j\le I_{\hbar},\; i\to j\}.
\]
\end{lem}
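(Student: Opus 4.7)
The plan is to mimic the proof of Lemma \ref{lem:multipliacation_by_Psi_ji}, but now carefully accounting for the anisotropic scaling of the local charts $\widetilde{\kappa}_i$ in Proposition \ref{prop:Local-charts-1}, where the fiber direction is dilated by $\hbar^{-(1-\beta)(1/2-\theta)-2\theta}$ relative to the base. The first claim (Setting $I^{\mathrm{ext}}$) is a statement about pointwise derivative bounds for $\widetilde{\psi}_{j,i}\cdot\widetilde{\chi}_\hbar = (\widetilde{\psi}_j\circ\widetilde{f}_{j,i})\cdot\widetilde{\chi}_\hbar$; the second claim is an operator norm estimate that I will deduce by combining a pointwise estimate on the support with Corollary \ref{lm:XM_exchange-1}.

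\medskip

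For the first claim, the support condition (C1) is immediate from the cutoff $\widetilde{\chi}_\hbar$, whose support lies in $\mathbb{D}^{(2d+d')}(2\hbar^{1/2-\theta})$. For the derivative condition (C2), I first show that the map $\widetilde{f}_{j,i}=\widetilde{\kappa}_j^{-1}\circ f_G\circ\widetilde{\kappa}_i$, viewed on $\mathbb{D}^{(2d+d')}(2\hbar^{1/2-\theta})$, satisfies derivative estimates of exactly the same form as the coordinate change transformations in item (4) of Proposition \ref{prop:Local-charts-1}, i.e.\ conditions (G0)--(G3) of Setting $II^{\mathrm{ext}}$ up to a bounded (in $\hbar$) linear factor absorbing the hyperbolic differential $Df|_{E_u}$. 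The skew product structure (G0) is automatic since $f_G$ covers $f$ via $p$. Conditions (G1)--(G3) follow by repeating the calculation in the proof of Proposition \ref{prop:Local-charts-1} (in particular the block expression (\ref{eq:tildekappa_in_coordinates})) with the composition $f_G$ inserted between $\widetilde{\kappa}_i$ and $\widetilde{\kappa}_j^{-1}$; the H\"older continuity of $E_u$ produces the factor $\hbar^{(1-\beta)(1/2-\theta)+2\theta}$ exactly as before, and the dilation of the fiber direction in the target chart absorbs it. Composing $\widetilde{\psi}_j\in\widetilde{\mathscr{X}}_\hbar$ (which itself satisfies (C2)) with $\widetilde{f}_{j,i}$ then preserves the condition (C2) by the chain rule and Faà di Bruno's formula, because each derivative of $\widetilde{f}_{j,i}$ of order $|\alpha|\ge 2$ contributes at most the factor allowed by (G3), and $\theta<\beta/20$ makes the resulting exponent still admissible.

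\medskip

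For the second claim I argue pointwise: on the support of $\widetilde{\psi}_{j,i}\cdot\widetilde{\chi}_\hbar$, the point $f_G\circ\widetilde{\kappa}_i(x,s)$ belongs to $\widetilde{\kappa}_j(\mathbb{D}^{(2d+d')}(\hbar^{1/2-\theta}))\subset\widetilde{U}_j$, so it lies within distance $O(\hbar^{\beta(1/2-\theta)-2\theta})$ of $\widetilde{m}_j$ in $G$. Since $\widetilde{V}\in C^\infty(G)$ is Lipschitz on compacts,
\[
\bigl|e^{\widetilde{V}\circ f_G\circ\widetilde{\kappa}_i(x,s)}-e^{\widetilde{V}_j}\bigr|\le C(\widetilde{V})\cdot\hbar^{\beta(1/2-\theta)-2\theta}
\]
on this set, and the same kind of bound holds for its derivatives divided by powers of $\hbar^{-(1/2-\theta)}$. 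Hence the difference $(e^{\widetilde{V}\circ f_G\circ\widetilde{\kappa}_i}-e^{\widetilde{V}_j})\cdot\widetilde{\psi}_{j,i}\cdot\widetilde{\chi}_\hbar$ belongs (after normalization by $\hbar^{-\theta}$) to a family satisfying Setting $I^{\mathrm{ext}}$, with uniform sup-norm $O(\hbar^{\beta(1/2-\theta)-2\theta})$. Applying Corollary \ref{lm:XM_exchange-1} to the multiplication operator by this difference yields the bound $O(\hbar^{\beta(1/2-\theta)-2\theta})+O(\hbar^\theta)$ on $\widetilde{\mathcal{H}}_\hbar^r$, which is $O(\hbar^\theta)$ because $\beta(1/2-\theta)-2\theta>\theta$ from the choice $\theta<\beta/20$.

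\medskip

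The main obstacle will be the careful bookkeeping of derivatives in the chain rule argument for $\widetilde{\psi}_j\circ\widetilde{f}_{j,i}$. The difficulty is that $\widetilde{\kappa}_i^{-1}$ amplifies fiber directions by a negative power of $\hbar$, and the H\"older (rather than smooth) dependence of $E_u(x)$ on $x$ enters through $f_G$ via its action on the Grassmannian fiber. One must verify that the mixed derivatives of $\widetilde{f}_{j,i}$ only pick up the exponent $-(1-\beta)(1/2-\theta)-2\theta$ per extra derivative (and not worse), which is precisely the tightness of the choice of coordinate scaling in Proposition \ref{prop:Local-charts-1}; this is exactly the phenomenon that was checked there for the pure coordinate change, and the argument transports to $\widetilde{f}_{j,i}$ because $f_G$ itself is $C^\infty$ and the non-smoothness only comes through $E_u$ built into the coordinates.
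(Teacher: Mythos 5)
Your proof is correct and takes essentially the same route as the paper's. The paper's own proof is terse — it simply says to transplant the proof of the non-extended Lemma \ref{lem:multipliacation_by_Psi_ji} (where the first claim is declared "obvious from the choice of coordinates" and the second is obtained by applying the extended Corollary \ref{lm:XM_exchange-1} to the multiplication operator by the difference $(e^{\widetilde V\circ f_G\circ\widetilde\kappa_i}-e^{\widetilde V_j})\cdot\widetilde\psi_{j,i}\cdot\widetilde\chi_\hbar$). You do exactly that: for the second claim, the pointwise Lipschitz estimate $|e^{\widetilde V\circ f_G\circ\widetilde\kappa_i}-e^{\widetilde V_j}|\le C\hbar^{\beta(1/2-\theta)-2\theta}$ on the support (using that $f_G\circ\widetilde\kappa_i$ lands in $\widetilde U_j$, whose diameter is $O(\hbar^{\beta(1/2-\theta)-2\theta})$ as noted in Proposition \ref{prop:Local-charts-1}(3)), combined with Corollary \ref{lm:XM_exchange-1} and the numerical inequality $\beta(1/2-\theta)-2\theta>\theta$ (ensured by $\theta<\beta/20$), gives the operator bound. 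Where you add value is in actually verifying the first claim (the paper waves it away): your Faà di Bruno bookkeeping is sound, and the decisive point is that each higher derivative of $\widetilde f_{j,i}$ (after removing the $\hbar$-bounded affine-hyperbolic factor via the decomposition (\ref{eq:fij_decomposition-1}) already used in Lemma \ref{lm:bounded_F-2}) costs a factor $\hbar^{-((1-\beta)(1/2-\theta)+2\theta)}$ per extra derivative, which is strictly milder than the budget $\hbar^{-(1/2-\theta)}$ allowed by (C2). One stylistic remark: the parenthetical "after normalization by $\hbar^{-\theta}$" in the last paragraph is confusing and unnecessary; the argument works directly with the unnormalized difference function as a member of a Setting $\mathrm{I}^{\mathrm{ext}}$ family.
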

\begin{proof}
We obtain the proof by following the argument in that of Lemma \ref{lem:multipliacation_by_Psi_ji}
with obvious correspondence, using Lemma \ref{lm:XM_exchange-1} instead
of Lemma \ref{lm:XM_exchange}.
\end{proof}
We next proceed to the argument corresponding to that in Subsection
\ref{sub:anisotropic-Sobolev-space}. The anisotropic Sobolev space
in the extended setting is defined as follows. 
\begin{defn}
\label{dfn:global_anisotropic_Sobolev_space-1} Let $C_{N}^{\infty}\left(P_{G},E_{u}\right)$
be the set of functions $u\in C_{N}^{\infty}\left(P_{G}\right)$ that
is supported on the inverse image (with respect to the projection
$\pi_{G}:P_{G}\to G$) of the $(\hbar^{\beta(1/2-\theta)}/4)$-neighborhood
of the section $E_{u}$. The \emph{anisotropic Sobolev space} $\mathcal{\widetilde{H}}_{\hbar}^{r}\left(P_{G},E_{u}\right)$
is defined as the completion of the space $C_{N}^{\infty}\left(P_{G},E_{u}\right)$
with respect to the norm
\[
\|u\|_{\mathcal{\widetilde{H}}_{\hbar}^{r}}:=\left(\sum_{i=1}^{I_{\hbar}}\|u_{i}\|_{\mathcal{\widetilde{H}}_{\hbar}^{r}(\real^{2d+d'})}^{2}\right)^{1/2}\quad\mbox{ for }u\in C_{N}^{\infty}\left(P_{G},E_{u}\right),
\]
where $u_{i}=\left(\mathbf{I}_{\hbar}\left(u\right)\right)_{i}\in C_{0}^{\infty}\left(\mathbb{D}^{(2d+d')}(\hbar^{1/2-\theta})\right)$
are the local data defined in (\ref{eq:local_data_ui-1}) and $\|u_{i}\|_{\mathcal{H}_{\hbar}^{r}(\real^{2d})}^{2}$
is the anisotropic Sobolev norm on $C_{0}^{\infty}\left(\mathbb{R}^{2d+d'}\right)$
in Definition \ref{def:escape_function_Hr-1}. We define the Hilbert
spaces $\mathcal{\widetilde{H}}_{\hbar}^{r,\pm}\left(P_{G},E_{u}\right)$
in the parallel manner, replacing $\|u_{i}\|_{\mathcal{\widetilde{H}}_{\hbar}^{r}(\real^{2d+d'})}^{2}$
by the norms $\|u_{i}\|_{\mathcal{\widetilde{H}}_{\hbar}^{r,\pm}(\real^{2d+d'})}^{2}$
respectively. 
\end{defn}
The next lemma corresponds to Lemma \ref{lm:IIbdd}. 
\begin{lem}
\label{lm:IIbdd-2} The projector $\widetilde{\boldI}_{\hbar}\circ\widetilde{\boldI}_{\hbar}^{*}:\mathcal{\widetilde{E}}_{\hbar}\rightarrow\mathcal{\widetilde{E}}_{\hbar}$
extends to bounded operators 
\[
\widetilde{\boldI}_{\hbar}\circ\widetilde{\boldI}_{\hbar}^{*}:\bigoplus_{i=1}^{I_{\hbar}}\mathcal{\widetilde{H}}_{\hbar}^{r,+}(\real^{2d+d'})\to\bigoplus_{i=1}^{I_{\hbar}}\mathcal{\widetilde{H}}_{\hbar}^{r}(\mathbb{D}^{(2d+d')}(\hbar^{1/2-\theta}))
\]
and 
\[
\widetilde{\boldI}_{\hbar}\circ\widetilde{\boldI}_{\hbar}^{*}:\bigoplus_{i=1}^{I_{\hbar}}\mathcal{\widetilde{H}}_{\hbar}^{r}(\real^{2d+d'})\to\bigoplus_{i=1}^{I_{\hbar}}\mathcal{\widetilde{H}}_{\hbar}^{r,-}(\mathbb{D}^{(2d+d')}(\hbar^{1/2-\theta})).
\]
Further the operator norms of these projection operators are bounded
by a constant independent of~$\hbar$. \end{lem}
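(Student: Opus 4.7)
The plan is to follow the proof of Lemma~\ref{lm:IIbdd} in parallel, replacing each ingredient by its Grassmannian counterpart developed in Subsections~\ref{sub:Discussion-linear_model} and~\ref{sub:nonlinear_extended}. First I would decompose $\widetilde{\boldI}_{\hbar}\circ\widetilde{\boldI}_{\hbar}^{*}$ component-wise: its $(j,i)$-component, for $1\le i,j\le I_{\hbar}$ with $\widetilde{U}_{i}\cap\widetilde{U}_{j}\neq\emptyset$, has the form
\[
(\widetilde{\boldI}_{\hbar}\circ\widetilde{\boldI}_{\hbar}^{*})_{j,i}=\multiplication(\widetilde{\psi}_{j})\circ\widetilde{\mathcal{L}}_{\widetilde{\kappa}_{j,i}}\circ\multiplication(\widetilde{\chi}_{\hbar}),
\]
where $\widetilde{\mathcal{L}}_{\widetilde{\kappa}_{j,i}}$ is the Euclidean prequantum transfer operator for the coordinate change $\widetilde{\kappa}_{j,i}=\widetilde{\kappa}_{j}^{-1}\circ\widetilde{\kappa}_{i}$. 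Using the factorization from Proposition~\ref{prop:Local-charts-1}(4), I would write $\widetilde{\kappa}_{j,i}=\widetilde{A}_{j,i}^{-1}\circ g_{j,i}$ with $g_{j,i}:=\widetilde{A}_{j,i}\circ\widetilde{\kappa}_{j,i}\in\widetilde{\mathcal{G}}_{\hbar}$, which then factors the operator as $\widetilde{\mathcal{L}}_{\widetilde{\kappa}_{j,i}}=\widetilde{\mathcal{L}}_{\widetilde{A}_{j,i}^{-1}}\circ\widetilde{\mathcal{L}}_{g_{j,i}}$.

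For the nonlinear piece I would invoke Proposition~\ref{pp:bdd_g-1} (and Corollary~\ref{lm:XM_exchange-1} to absorb $\multiplication(\widetilde{\chi}_{\hbar})$), which gives
\[
\|\widetilde{\mathcal{L}}_{g_{j,i}}\circ\multiplication(\widetilde{\chi}_{\hbar})\|_{\mathcal{\widetilde{H}}_{\hbar}^{r,+}(\real^{2d+d'})\to\mathcal{\widetilde{H}}_{\hbar}^{r}(\real^{2d+d'})}\le C_{0}
\]
and the analogous bound from $\mathcal{\widetilde{H}}_{\hbar}^{r}$ to $\mathcal{\widetilde{H}}_{\hbar}^{r,-}$, both with $C_{0}$ independent of $\hbar$ and $(i,j)$. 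For the affine piece $\widetilde{\mathcal{L}}_{\widetilde{A}_{j,i}^{-1}}$, since $\widetilde{A}_{j,i}=A_{j,i}\oplus\widehat{A}_{j,i}$ is an isometry respecting the splitting $\real^{2d+d'}=\real^{2d}\oplus\real^{d'}$, its canonical lift preserves the weight function $\widetilde{\mathcal{W}}_{\hbar}^{r}$ defined in (\ref{eq:def_W_W+-1}); this gives the extended analog of Lemma~\ref{lm:invariance_wrt_translation}, so that $\widetilde{\mathcal{L}}_{\widetilde{A}_{j,i}^{-1}}$ is an isometry on each of $\mathcal{\widetilde{H}}_{\hbar}^{r}(\real^{2d+d'})$ and $\mathcal{\widetilde{H}}_{\hbar}^{r,\pm}(\real^{2d+d'})$. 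The multiplication $\multiplication(\widetilde{\psi}_{j})$ is uniformly bounded by Corollary~\ref{lm:XM_exchange-1} because the family $\{\widetilde{\psi}_{j}\}$ satisfies Setting~I$^{\mathrm{ext}}$ by Proposition~\ref{prop:Local-charts-1}(5).

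Finally, to sum up the components and get a bound on $\widetilde{\boldI}_{\hbar}\circ\widetilde{\boldI}_{\hbar}^{*}$ itself rather than a mere bound on each entry, I would invoke the uniform finite-multiplicity estimate (\ref{eq:bound_on_i_sim_j}) on overlapping charts (which transfers directly to the extended covering by $\widetilde{U}_{i}$ from Proposition~\ref{prop:Local-charts-1}(3)), combined with the extended version of the pseudo-local property (Lemma~\ref{lm:pseudo_local_property}) for the inner product on $\mathcal{\widetilde{H}}_{\hbar}^{r}(\real^{2d+d'})$. The inequality $\sum_{i}\|\sum_{j:i\to j}v_{j,i}\|^{2}\le C_{0}\sum_{i,j}\|v_{j,i}\|^{2}+\mathcal{O}(\hbar^{\epsilon})\|\cdot\|^{2}$ then produces the desired uniform bound.

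The main obstacle I expect is the verification that the weight function $\widetilde{\mathcal{W}}_{\hbar}^{r}$ is truly invariant under the lift of $\widetilde{A}_{j,i}=A_{j,i}\oplus\widehat{A}_{j,i}$. This requires checking that the splitting $(\nu,\zeta)\oplus(s,\xi_{s})$ of phase space used to define $\widetilde{\mathcal{W}}_{\hbar}^{r}$ is preserved by the canonical lift and that the Euclidean norms $|(\zeta_{p},\xi_{s})|$ and $|(\zeta_{q},s)|$ are preserved by the action of $A_{j,i}\oplus(^{t}A_{j,i})^{-1}\oplus\widehat{A}_{j,i}\oplus(^{t}\widehat{A}_{j,i})^{-1}$; this holds precisely because $A_{j,i}$ and $\widehat{A}_{j,i}$ are isometries that preserve the relevant subspace decomposition. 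A secondary technical point is handling the $(s,\xi_{s})$ variables in the pseudo-local estimate, since the local coordinate charts $\widetilde{\kappa}_{i}$ are anisotropically scaled in the fiber direction; however the weight function and Bargmann transform $\Bargmann_{(x,s)}$ in (\ref{eq:Bargmann_extended}) have been set up precisely so that the scaling is naturally absorbed.
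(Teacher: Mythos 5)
Your proposal matches the paper's proof: the paper simply cites Lemma~\ref{lm:IIbdd}'s proof with Proposition~\ref{pp:bdd_g-1} and Corollary~\ref{lm:XM_exchange-1} substituted, together with the factorization through $\widetilde{A}_{j,i}$ and $g_{j,i}$ from Proposition~\ref{prop:Local-charts-1}(4), exactly as you do. One minor over-engineering: the pseudo-local property (Lemma~\ref{lm:pseudo_local_property}) is not needed here; the upper bound follows from (\ref{eq:bound_on_i_sim_j}) and Cauchy--Schwarz alone, since each row and column of the block matrix has $O(1)$ nonzero entries with uniformly bounded norms, while the pseudo-local lemma is reserved in the paper for lower-bound estimates where cancellation must be ruled out (as in Proposition~\ref{prop:key_proposition2}). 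Also, $\multiplication(\widetilde{\chi}_{\hbar})$ is already built into the statement of Proposition~\ref{pp:bdd_g-1}, so Corollary~\ref{lm:XM_exchange-1} is needed only for $\multiplication(\widetilde{\psi}_{j})$.
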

\begin{proof}
The proof is obtained by following that of Lemma \ref{lm:IIbdd},
setting 
\begin{equation}
\widetilde{\scrG}_{\hbar}=\{\widetilde{A}_{j,i}\circ\widetilde{\kappa}_{j,i}\mid1\le i,j\le I_{\hbar},\; U_{i}\cap U_{j}\neq\emptyset\}\label{eq:scrXG_setting-1}
\end{equation}
and 
\[
\widetilde{\scrX}_{\hbar}=\{\widetilde{\psi}_{j}\circ\widetilde{\kappa}_{j,i}\cdot\widetilde{\chi}_{\hbar}\mid1\le i,j\le I_{\hbar},\; U_{i}\cap U_{j}\neq\emptyset\},
\]
and by using Proposition \ref{pp:bdd_g-1} and Corollary \ref{lm:XM_exchange-1}
instead of Proposition \ref{pp:bdd_g} and Corollary \ref{lm:XM_exchange}.
\end{proof}
The next lemma corresponds to Lemma \ref{lm:bounded_F}. We suppose
that the hyperbolicity exponent $\lambda>1$ of the flow $f_{G}$
is sufficiently large. (Say $\lambda>9$.) 
\begin{lem}
\label{lm:bounded_F-2} The operator $\widetilde{\boldF}_{\hbar}$
defined in (\ref{eq:Lifted_operator_F-1}) extends uniquely to the
bounded operator 
\begin{equation}
\widetilde{\boldF}_{\hbar}:\bigoplus_{i=1}^{I_{\hbar}}\mathcal{\widetilde{H}}_{\hbar}^{r}(\real^{2d+d'})\to\bigoplus_{i=1}^{I_{\hbar}}\mathcal{\widetilde{H}}_{\hbar}^{r}(\mathbb{D}^{(2d+d')}(\hbar^{1/2-\theta}))\label{eq:Lifted_operator_F2-1}
\end{equation}
and the operator norm is bounded by a constant independent of $\hbar$.
Consequently the same result holds for the prequantum transfer operator
$\widetilde{F}_{\hbar}:\mathcal{\widetilde{H}}_{\hbar}^{r}\left(P_{G},E_{u}\right)\rightarrow\mathcal{\widetilde{H}}_{\hbar}^{r}\left(P_{G},E_{u}\right)$. \end{lem}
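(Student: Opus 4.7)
\medskip

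The plan is to mimic the proof of Lemma \ref{lm:bounded_F} with the objects replaced by their Grassmanian-extended counterparts, the essential inputs being Proposition \ref{prop:prequatum_op_for_hyp_linear-1} (linear model), Proposition \ref{pp:bdd_g-1} (nonlinear perturbation), Proposition \ref{prop:Local-charts-1}(4) (control on the coordinate-change transformations) and the finite intersection-multiplicity estimate (\ref{eq:arrow}). By Proposition \ref{prop:bfF} and (\ref{eq:arrow}), it suffices to bound each component $\widetilde{\boldF}_{j,i}\colon \mathcal{\widetilde{H}}_{\hbar}^{r}(\real^{2d+d'})\to\mathcal{\widetilde{H}}_{\hbar}^{r}(\real^{2d+d'})$ uniformly in $\hbar$ and in $1\le i,j\le I_{\hbar}$ with $i\to j$.

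First I would factor the local expression $\tilde{f}_{j,i}=\widetilde{\kappa}_{j}^{-1}\circ f_G\circ\widetilde{\kappa}_{i}$ as
\[
\tilde{f}_{j,i}=\tilde{a}_{j,i}\circ\tilde{g}_{j,i}\circ\widetilde{B}_{j,i},
\]
where $\tilde{a}_{j,i}=\widetilde{A}_{j,i}^{-1}$ is the isometric affine map furnished by Proposition \ref{prop:Local-charts-1}(4), $\widetilde{B}_{j,i}$ is the linearization at $0$ of $\tilde{a}_{j,i}^{-1}\circ\tilde{f}_{j,i}$ (which, since $E_u$ is $f_G$-invariant and the coordinates $\widetilde{\kappa}_i$ are adapted to the attractor, has the block form $A\oplus{}^tA^{-1}\oplus\widehat{A}$ of (\ref{eq:hyperbolic_linear_map_extended}) with $A$ expanding and $\widehat{A}$ contracting), and $\tilde{g}_{j,i}:=\tilde{a}_{j,i}^{-1}\circ\tilde{f}_{j,i}\circ\widetilde{B}_{j,i}^{-1}$ is a nonlinear diffeomorphism tangent to the identity at $0$. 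The family $\{\tilde{g}_{j,i}\}$ will satisfy the conditions (G0)--(G3) of Setting II$^{\mathrm{ext}}$, which is the point where Proposition \ref{prop:Local-charts-1}(4) is used decisively: the singular rescaling of the fibre direction in $\widetilde{\kappa}_i$ is precisely calibrated so that the non-smoothness of $E_u$ only produces a perturbation of the correct order.

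Accordingly I decompose the operator as $\widetilde{\boldF}_{j,i}=\widetilde{\prequantumL}^{(0)}\circ\widetilde{\prequantumL}^{(1)}\circ\widetilde{\prequantumL}^{(2)}$, where $\widetilde{\prequantumL}^{(0)}:=\widetilde{\prequantumL}_{\tilde{a}_{j,i}}$, $\widetilde{\prequantumL}^{(2)}:=\widetilde{\prequantumL}_{\widetilde{B}_{j,i}}$, and $\widetilde{\prequantumL}^{(1)}$ is the nonlinear factor
\[
\widetilde{\prequantumL}^{(1)}u=\widetilde{\prequantumL}_{\tilde{g}_{j,i}}\circ\multiplication(\widetilde{\chi}_{\hbar})\bigl(((e^{\widetilde{V}\circ f_G\circ\widetilde{\kappa}_i}\cdot\widetilde{\psi}_{j,i}\cdot\widetilde{\chi}_{\hbar})\circ\widetilde{B}_{j,i}^{-1})\cdot u\bigr),
\]
where the insertion of $\multiplication(\widetilde{\chi}_{\hbar})$ is justified because the multiplier is supported in $\mathbb{D}^{(2d+d')}(2\hbar^{1/2-\theta})$ for small $\hbar$. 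The operator $\widetilde{\prequantumL}^{(0)}$ is an isometry on each of $\mathcal{\widetilde{H}}_{\hbar}^{r}$ and $\mathcal{\widetilde{H}}_{\hbar}^{r,\pm}$ (by the analogue of Lemma \ref{lm:invariance_wrt_translation} for the extended Sobolev spaces, which follows directly from $\widetilde{A}_{j,i}\in\mathcal{A}$ acting isometrically and preserving the weight $\widetilde{\mathcal{W}}_{\hbar}^{r}$). Combining Proposition \ref{prop:prequatum_op_for_hyp_linear-1} with the absorption factor $|\det(Dg|_{\ker Dp})|^{-1}$ present in the definition of $\widetilde{\prequantumL}_g$, the operator $\widetilde{\prequantumL}^{(2)}$ is bounded from $\mathcal{\widetilde{H}}_{\hbar}^{r}(\real^{2d+d'})$ to $\mathcal{\widetilde{H}}_{\hbar}^{r,+}(\real^{2d+d'})$ with norm uniform in $\hbar$ (the absorption factor exactly cancels the $|\det\widehat{A}|^{-1/2}$ discrepancy between the $L^{2}$-normalized model $\widetilde{\mathcal{L}}$ and the actual prequantum transfer operator, up to a further factor bounded uniformly by the hyperbolicity of $f_G$). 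Finally, Proposition \ref{pp:bdd_g-1} and Corollary \ref{lm:XM_exchange-1}, applied to the families $\scrG_\hbar=\{\tilde{g}_{j,i}\}$ and $\widetilde{\scrX}_\hbar=\{(e^{\widetilde{V}\circ f_G\circ\widetilde{\kappa}_i}\cdot\widetilde{\psi}_{j,i}\cdot\widetilde{\chi}_\hbar)\circ\widetilde{B}_{j,i}^{-1}\}$ (whose verification of conditions (C1)--(C2) and (G0)--(G3) uses Lemma \ref{lem:multipliacation_by_Psi_ji-1} and Proposition \ref{prop:Local-charts-1}(4) respectively), give that $\widetilde{\prequantumL}^{(1)}\colon\mathcal{\widetilde{H}}_{\hbar}^{r,+}(\real^{2d+d'})\to\mathcal{\widetilde{H}}_{\hbar}^{r}(\real^{2d+d'})$ has operator norm uniformly bounded. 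Composing the three bounds yields the desired uniform bound on $\widetilde{\boldF}_{j,i}$, and summing over the at most $C(f)$ pairs $(i,j)$ with $i\to j$ via (\ref{eq:arrow}) yields the bound on $\widetilde{\boldF}_\hbar$. The boundedness of $\widetilde{F}_\hbar$ on $\mathcal{\widetilde{H}}_{\hbar}^{r}(P_G,E_u)$ then follows from the identity $\widetilde{F}_N\hat{\chi}=\widetilde{\boldI}_\hbar^*\circ\widetilde{\boldF}_\hbar\circ\widetilde{\boldI}_\hbar$ together with Lemma \ref{lm:IIbdd-2} and the isometric embedding property of $\widetilde{\boldI}_\hbar$.

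The main technical obstacle, compared with the unextended case, is to show that the family $\{\tilde{g}_{j,i}\}$ genuinely belongs to Setting II$^{\mathrm{ext}}$, since the local coordinate charts $\widetilde{\kappa}_i$ in Proposition \ref{prop:Local-charts-1} are highly anisotropic in the fibre direction and a priori could amplify the H\"older-in-$x$ non-smoothness of $E_u$ into severe non-smoothness of $\tilde{g}_{j,i}$. Checking that the skew-product structure (G0), the improved bound (G2) and the derivative growth (G3)—in particular the delicate exponent $-(1-\beta)(1/2-\theta)-2\theta$ in (G3)—come out consistently is precisely what the calibration of the rescaling factor $\hbar^{-(1-\beta)(1/2-\theta)-2\theta}$ in Proposition \ref{prop:Local-charts-1}(3) was engineered for; once that verification is carried out, the rest of the proof transcribes from Lemma \ref{lm:bounded_F} with purely notational changes.
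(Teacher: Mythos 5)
Your overall plan coincides with the paper's, which indeed transcribes the proof of Lemma~\ref{lm:bounded_F} into the extended setting using the factorisation $\tilde{f}_{j,i}=\tilde{a}_{j,i}\circ\tilde{g}_{j,i}\circ\widetilde{B}_{j,i}$ and the sequence of estimates from Propositions~\ref{pp:bdd_g-1}, \ref{prop:prequatum_op_for_hyp_linear-1} and Corollary~\ref{lm:XM_exchange-1}. However, there is a concrete slip in your choice of $\tilde{a}_{j,i}$ that needs repair before the rest goes through.

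You set $\tilde{a}_{j,i}=\widetilde{A}_{j,i}^{-1}$ where $\widetilde{A}_{j,i}$ is the isometric affine map from Proposition~\ref{prop:Local-charts-1}(4). But that $\widetilde{A}_{j,i}$ compensates the \emph{coordinate-change} transformation $\widetilde{\kappa}_{j,i}=\widetilde{\kappa}_j^{-1}\circ\widetilde{\kappa}_i$ and is defined only for pairs $(i,j)$ with $\widetilde U_i\cap\widetilde U_j\neq\emptyset$; it has nothing to do with the \emph{dynamics} $\tilde{f}_{j,i}=\widetilde{\kappa}_j^{-1}\circ f_G\circ\widetilde{\kappa}_i$, which is defined for pairs with $i\to j$. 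In particular $\widetilde{A}_{j,i}$ is simply undefined for most of the pairs you need, and even where it happens to exist the composition $\widetilde{A}_{j,i}\circ\tilde{f}_{j,i}$ has no reason to fix $0$ or to have the block-diagonal derivative you assert, so the resulting $\tilde{g}_{j,i}$ would not satisfy (G2). What the argument actually requires (and what the paper's proof takes) is a genuine \emph{translation} $\tilde{a}_{j,i}$ in a direction belonging to $\real^{2d}\oplus\{0\}$, chosen to cancel the $\real^{2d}$-component of $\tilde{f}_{j,i}(0)$; the residual $\real^{d'}$-component of $\tilde{f}_{j,i}(0)$ is then $O(\hbar^{1/2+\theta})$ (this is why the coordinates $\widetilde{\kappa}_i$ were built around $E_u(m_i)$) and is absorbed into $\tilde{g}_{j,i}$ under the relaxed condition $|\hat g(0,0)|<C\hbar^{1/2+\theta}$ of (G2). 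This restriction of the direction of $\tilde{a}_{j,i}$ is essential, since the weight $\widetilde{\mathcal{W}}_{\hbar}^r$ depends on the $s$-coordinate and the associated transfer operator is an isometry of $\mathcal{\widetilde{H}}_{\hbar}^{r}$ only for translations inside $\real^{2d}\oplus\{0\}$. Once $\tilde{a}_{j,i}$ is taken to be such a translation and $\widetilde{B}_{j,i}$ the linearisation of $\tilde{a}_{j,i}^{-1}\circ\tilde{f}_{j,i}$ at $0$, your verification of (G0)--(G3) for $\tilde{g}_{j,i}$, the use of Proposition~\ref{prop:prequatum_op_for_hyp_linear-1} for $\widetilde{\prequantumL}^{(2)}$, of Proposition~\ref{pp:bdd_g-1} and Corollary~\ref{lm:XM_exchange-1} for $\widetilde{\prequantumL}^{(1)}$, and the reassembly via (\ref{eq:arrow}) and Lemma~\ref{lm:IIbdd-2} go through exactly as you describe.
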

\begin{proof}
We follow the argument in the proof of Lemma \ref{lm:bounded_F} with
slight modification. We express the diffeomorphism $\tilde{f}_{j,i}$
in (\ref{eq:f_ji-1}) as a composition 
\begin{equation}
\tilde{f}_{j,i}=\tilde{a}_{j,i}\circ\tilde{g}_{j,i}\circ\widetilde{B}_{j,i}\label{eq:fij_decomposition-1}
\end{equation}
where 
\begin{itemize}
\item $\tilde{a}_{j,i}:\real^{2d+d'}\to\real^{2d+d'}$ is a translation
in a direction in $\real^{2d}\oplus\{0\}$, 
\item $\widetilde{B}_{j,i}:\real^{2d}\to\real^{2d}$ is a linear map of
the form (\ref{eq:hyperbolic_linear_map_extended}) considered in
Subsection \ref{sub:Discussion-linear_model}.
\item $\tilde{g}_{j,i}$ is a diffeomorphism such that $\widetilde{\scrG}_{\hbar}=\{\tilde{g}_{j,i}\}_{1\le i,j\le I_{\hbar}}$
satisfies the condition (G0),(G1),(G2) and (G3) in Setting II$^{\mathrm{ext}}$
in Subsection \ref{sub:nonlinear_extended}. 
\end{itemize}
This corresponds to the expression (\ref{eq:fij_decomposition}) in
the proof of Lemma \ref{lm:bounded_F}. 
\begin{rem}
(1) The prequantum transfer operator $\widetilde{L}_{\tilde{a}_{j,i}}$
associated to a translation on $\real^{2d+d'}$ acts on the anisotropic
Sobolev space $\widetilde{\mathcal{H}}_{\hbar}^{r}(\real^{2d+d'})$
as an isometry if the direction of translation belongs to the subspace
$\real^{2d}\oplus\{0\}$. But this is not true for translations in
the other directions. 

(2) The proof of the claim on $\tilde{g}_{j,i}$ need some argument.
But this is essentially parallel to that in the proof of Proposition
\ref{prop:Local-charts-1}. 
\end{rem}
From the expression (\ref{eq:fij_decomposition-1}), the operator
$\boldF_{j,i}$ is expressed as the composition 
\begin{equation}
\boldF_{j,i}=\widetilde{\prequantumL}^{(0)}\circ\widetilde{\prequantumL}^{(1)}\circ\widetilde{\prequantumL}^{(2)}\label{eq:expression_F_ij-1}
\end{equation}
where $\mathcal{\widetilde{L}}^{\left(0\right)}:=\prequantumL_{\tilde{a}_{j,i}}$
and $\widetilde{\mathcal{L}}^{\left(2\right)}:=\widetilde{\prequantumL}_{\widetilde{B}_{j,i}}$
are the Euclidean prequantum transfer operators (\ref{eq:Lg-1}) associated
to the diffeomorphisms $g=a_{ij}$ and $g=B_{ij}$ respectively, while
$\widetilde{\prequantumL}^{(1)}$ is the operator of the form 
\[
\widetilde{\prequantumL}^{(1)}u=\widetilde{\prequantumL}_{\tilde{g}_{j,i}}\bigg(\big((e^{\widetilde{V}\circ\tilde{f}\circ\widetilde{\kappa}_{i}}\cdot\widetilde{\psi}_{j,i}\cdot\widetilde{\chi}_{\hbar})\circ\widetilde{B}_{j,i}^{-1}\big)\cdot u\bigg)
\]
with $\widetilde{\psi}_{j,i}$ the function defined in (\ref{eq:psi_ji-1}).
Then we follow the argument in the latter part of the proof of Lemma
\ref{lm:bounded_F}, replacing some proposition by those prepared
in the last subsection.
\end{proof}
Next we introduce the projection operators 
\begin{equation}
{\boldt}_{\hbar}^{(k)}:\bigoplus_{i=1}^{I_{\hbar}}\mathcal{\widetilde{H}}_{\hbar}^{r}(\real^{2d+d'})\to\bigoplus_{i=1}^{I_{\hbar}}\mathcal{\widetilde{H}}_{\hbar}^{r}(\real^{2d+d'}),\qquad{\boldt}_{\hbar}^{(k)}((u_{i})_{i=1}^{I_{\hbar}})=(\romet_{\hbar}^{(k)}(u_{i}))_{i=1}^{I_{\hbar}},\label{Def:boldt_extended}
\end{equation}
 for $0\le k\le n$ and 
\[
\tilde{\boldt}_{\hbar}:\bigoplus_{i=1}^{I_{\hbar}}\mathcal{\widetilde{H}}_{\hbar}^{r}(\real^{2d+d'})\to\bigoplus_{i=1}^{I_{\hbar}}\mathcal{\widetilde{H}}_{\hbar}^{r}(\real^{2d+d'}),\qquad\tilde{\boldt}_{\hbar}((u_{i})_{i=1}^{I_{\hbar}})=(\tilde{\romet}_{\hbar}(u_{i}))_{i=1}^{I_{\hbar}},
\]
where $\romet_{\hbar}^{(k)}$ and $\tilde{\romet}_{\hbar}$ are the
projection operators introduced in (\ref{eq:def_t-1}) and (\ref{eq:def_tt-1}).
As before, we set $\mathbf{t}_{\hbar}^{(n+1)}=\tilde{\boldt}_{\hbar}$,
so that the set of operators $\{\boldt_{\hbar}^{(k)}\}_{k=0}^{n+1}$
are complete sets of mutually commuting projection operators. (Notice
that we are using the same notation $\mathbf{t}_{\hbar}^{(k)}$ and
$\tilde{\boldt}_{\hbar}$ in this extended setting as (\ref{eq:def_tk})
and (\ref{eq:def_tilde_t}) in Subsection \ref{sub:The-main-propositions}.) 

The following two propositions corresponds to Proposition \ref{pp1}
and \ref{prop:key_proposition2}. For the proofs, we have only to
follow those of Proposition \ref{pp1} and \ref{prop:key_proposition2}
respectively, checking the correspondence in the notation and replacing
the propositions by the corresponding ones. 
\begin{prop}
\label{pp1-1} There are constants $\epsilon>0$ and $C_{0}>0$, independent
of $\hbar$, such that the following holds: We have that 
\[
\left\Vert {\boldt}_{\hbar}^{(k)}\circ(\widetilde{\boldI}_{\hbar}\circ\widetilde{\boldI}_{\hbar}^{*})\right\Vert _{\bigoplus_{i=1}^{I_{\hbar}}\mathcal{\widetilde{H}}_{\hbar}^{r}(\real^{2d+d'})\rightarrow\bigoplus_{i=1}^{I_{\hbar}}\mathcal{\widetilde{H}}_{\hbar}^{r}(\real^{2d+d'})}<C_{0},\quad\mbox{and }
\]
\[
\left\Vert (\widetilde{\boldI}_{\hbar}\circ\widetilde{\boldI}_{\hbar}^{*})\circ{\boldt}_{\hbar}^{(k)}\right\Vert _{\bigoplus_{i=1}^{I_{\hbar}}\mathcal{\widetilde{H}}_{\hbar}^{r}(\real^{2d+d'})\rightarrow\bigoplus_{i=1}^{I_{\hbar}}\mathcal{\widetilde{H}}_{\hbar}^{r}(\real^{2d+d'})}<C_{0}
\]
for $0\le k\le n$. Also we have, for the norm of the commutators,
that 
\begin{equation}
\left\Vert \left[{\boldt}_{\hbar}^{(k)},(\widetilde{\boldI}_{\hbar}\circ\widetilde{\boldI}_{\hbar}^{*})\right]\right\Vert _{\bigoplus_{i=1}^{I_{\hbar}}\mathcal{\widetilde{H}}_{\hbar}^{r}(\real^{2d+d'})\to\bigoplus_{i=1}^{I_{\hbar}}\mathcal{\widetilde{H}}_{\hbar}^{r}(\real^{2d+d'})}\le C_{0}\hbar^{\epsilon}\label{eqn:com-1}
\end{equation}
for $0\le k\le n$. 
\end{prop}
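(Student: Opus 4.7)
The plan is to mirror the proof of Proposition \ref{pp1} verbatim, replacing each ingredient by its Grassmannian-extended counterpart developed in Subsection \ref{sub:nonlinear_extended}, taking advantage of the fact that the anisotropy of the new local charts has already been absorbed into the verification of Setting I$^{\mathrm{ext}}$ and Setting II$^{\mathrm{ext}}$.

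For the first two boundedness estimates, I would combine Lemma \ref{lm:IIbdd-2} --- which tells us that $\widetilde{\boldI}_{\hbar}\circ\widetilde{\boldI}_{\hbar}^{*}$ is bounded from $\bigoplus_{i}\mathcal{\widetilde{H}}_{\hbar}^{r,+}(\real^{2d+d'})$ to $\bigoplus_{i}\mathcal{\widetilde{H}}_{\hbar}^{r}(\real^{2d+d'})$ and from $\bigoplus_{i}\mathcal{\widetilde{H}}_{\hbar}^{r}(\real^{2d+d'})$ to $\bigoplus_{i}\mathcal{\widetilde{H}}_{\hbar}^{r,-}(\real^{2d+d'})$ --- with a Grassmannian analogue of Lemma \ref{lm:pik_rpm} saying that $\boldt_{\hbar}^{(k)}$ is bounded from $\bigoplus_{i}\mathcal{\widetilde{H}}_{\hbar}^{r,-}(\real^{2d+d'})$ to $\bigoplus_{i}\mathcal{\widetilde{H}}_{\hbar}^{r,+}(\real^{2d+d'})$. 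The latter follows componentwise from the kernel estimate (\ref{eq:kernel_T}) for $\calT_{\hbar}^{(k)}$ (now applied in dimension $D=d+d'$) together with the fact that the projectors $\romet_{\hbar}^{(k)}$ of (\ref{eq:def_t-1}) are conjugated to $\mathrm{Id}\otimes T^{(k)}$ by the unitary $\widetilde{\mathcal{U}}$ of Proposition \ref{prop:Product_expression_of_tildeL}. Composing these two bounds and using the trivial continuous embeddings $\mathcal{\widetilde{H}}_{\hbar}^{r,-}\hookrightarrow\mathcal{\widetilde{H}}_{\hbar}^{r}\hookrightarrow\mathcal{\widetilde{H}}_{\hbar}^{r,+}$ (from (\ref{eq:wrpm}) applied in the extended setting) yields the desired uniform operator-norm bounds on $\bigoplus_{i}\mathcal{\widetilde{H}}_{\hbar}^{r}(\real^{2d+d'})$.

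For the commutator estimate (\ref{eqn:com-1}), I would expand $\boldt_{\hbar}^{(k)}\circ(\widetilde{\boldI}_{\hbar}\circ\widetilde{\boldI}_{\hbar}^{*})$ and $(\widetilde{\boldI}_{\hbar}\circ\widetilde{\boldI}_{\hbar}^{*})\circ\boldt_{\hbar}^{(k)}$ as sums over pairs $(i,j)$ with $\widetilde{U}_{i}\cap\widetilde{U}_{j}\neq\emptyset$, using the concrete formulas (\ref{eq:local_data_ui-1}) and (\ref{eq:expression_of_I*-1}). Each component involves a prequantum transfer operator $\widetilde{\mathcal{L}}_{\widetilde{\kappa}_{j,i}}$ and a multiplication operator $\multiplication(\widetilde{\psi}_{j}\circ\widetilde{\kappa}_{j,i}\cdot\widetilde{\chi}_{\hbar})$. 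Precomposing each $\widetilde{\kappa}_{j,i}$ with the isometric affine map $\widetilde{A}_{j,i}$ furnished by Proposition \ref{prop:Local-charts-1}(4), the resulting family $\widetilde{\scrG}_{\hbar}$ lies in Setting II$^{\mathrm{ext}}$ while the pulled-back cutoff functions lie in Setting I$^{\mathrm{ext}}$; and the prequantum transfer operators for the affine pieces $\widetilde{A}_{j,i}$ commute with $\romet_{\hbar}^{(k)}$ by construction (they are of the form $\mathrm{Id}\otimes(\text{linear})$ under $\widetilde{\mathcal{U}}$). I can then invoke Corollary \ref{cor:L_g_almost_commutes_with_t^k-1} to exchange the nonlinear part $\widetilde{\mathcal{L}}_{\tilde g_{j,i}}$ with $\romet_{\hbar}^{(k)}$ at cost $\mathcal{O}(\hbar^{\epsilon})$, and Corollary \ref{cor:XT_exchange-2} to exchange multiplication by $\widetilde{\psi}_{j}\circ\widetilde{\kappa}_{j,i}\cdot\widetilde{\chi}_{\hbar}$ with $\romet_{\hbar}^{(k)}$ at the same cost. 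The overlap bound (\ref{eq:bound_on_i_sim_j}), which is preserved for the extended charts of Proposition \ref{prop:Local-charts-1}(3), ensures the number of nonzero summands per row/column is bounded independently of $\hbar$, and summing produces the required $\mathcal{O}(\hbar^{\epsilon})$ estimate in operator norm.

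The main obstacle is not in the present proposition itself but has already been overcome upstream in Proposition \ref{prop:Local-charts-1}: namely, showing that the highly anisotropic local coordinates $\widetilde{\kappa}_{i}$, which are dilated by the large factor $\hbar^{-(1-\beta)(1/2-\theta)-2\theta}$ transversally to the base, nonetheless yield coordinate change transformations $\widetilde{A}_{j,i}\circ\widetilde{\kappa}_{j,i}$ compatible with the weakened bounds (G2)--(G3) of Setting II$^{\mathrm{ext}}$. Once that verification is granted, the corollaries cited above are uniform in $\hbar$ and in $1\le i,j\le I_{\hbar}$, and the argument is a mechanical transcription of the one used for Proposition \ref{pp1}; the only points requiring vigilance are to keep track of the anisotropy of the weight $\mathcal{\widetilde{W}}_{\hbar}^{r}$ in the extra fiber variables $(s,\xi_{s})$ through Definition \ref{def:escape_function_Hr-1}, which is exactly what distinguishes (G2)--(G3) of Setting II$^{\mathrm{ext}}$ from those of Setting II.
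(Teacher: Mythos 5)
Your proposal is correct and follows exactly the paper's strategy: the paper defers this proof to ``follow the proof of Proposition~\ref{pp1}, checking the correspondence in the notation and replacing the propositions by the corresponding ones,'' which is precisely what you do, invoking Lemma~\ref{lm:IIbdd-2}, the extended analogue of Lemma~\ref{lm:pik_rpm}, Corollary~\ref{cor:L_g_almost_commutes_with_t^k-1}, Corollary~\ref{cor:XT_exchange-2}, the commutation of $\widetilde{\prequantumL}_{\widetilde{A}_{j,i}}$ with $\romet_{\hbar}^{(k)}$, and the overlap bound. One small slip: since $W^{r,-}\le W^{r}\le W^{r,+}$, the continuous embeddings run in the opposite direction, $\mathcal{\widetilde{H}}_{\hbar}^{r,+}\hookrightarrow\mathcal{\widetilde{H}}_{\hbar}^{r}\hookrightarrow\mathcal{\widetilde{H}}_{\hbar}^{r,-}$, which is what makes the composition $\mathcal{\widetilde{H}}_{\hbar}^{r}\to\mathcal{\widetilde{H}}_{\hbar}^{r,-}\to\mathcal{\widetilde{H}}_{\hbar}^{r,+}\hookrightarrow\mathcal{\widetilde{H}}_{\hbar}^{r}$ close up for the first two bounds.
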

~
\begin{prop}
\label{prop:key_proposition2-1} There are constants $\epsilon>0$
and $C>0$, independent of $\hbar$, such that 
\begin{equation}
\left\Vert \left[\widetilde{\boldF}_{\hbar},{\boldt}_{\hbar}^{(k)}\right]\right\Vert _{\bigoplus_{i=1}^{I_{\hbar}}\mathcal{\widetilde{H}}_{\hbar}^{r}(\real^{2d+d'})\to\bigoplus_{i=1}^{I_{\hbar}}\mathcal{\widetilde{H}}_{\hbar}^{r}(\real^{2d+d'})}\le C\hbar^{\epsilon}\quad\mbox{for }1\le k\le n+1.\label{eqn:com2-1}
\end{equation}
Further there exists a constant $C_{0}>0$, which is independent of
$f_{G}$, $\widetilde{V}$ and $\hbar$, such that 
\begin{enumerate}
\item For $0\le k\le n+1$, it holds
\[
\!\!\!\left\Vert {\boldt}_{\hbar}^{(k)}\circ\widetilde{\boldF}_{\hbar}\circ{\boldt}_{\hbar}^{(k)}\right\Vert _{\bigoplus_{i=1}^{I_{\hbar}}\mathcal{\widetilde{H}}_{\hbar}^{r}(\real^{2d+d'})\to\bigoplus_{i=1}^{I_{\hbar}}\mathcal{\widetilde{H}}_{\hbar}^{r}(\real^{2d+d'})}\le C_{0}\sup\left(|e^{\widetilde{V}}|\|Df|_{E_{u}}\|_{\min}^{-k}|\det Df|_{E_{u}}|^{-1/2}\right)
\]

\item If $\boldu\in\bigoplus_{i=1}^{I_{\hbar}}\mathcal{\widetilde{H}}_{\hbar}^{r}(\real^{2d+d'})$
satisfies $\widetilde{\boldI}_{\hbar}\circ\widetilde{\boldI}_{\hbar}^{*}(\boldu)=\boldu$
and 
\[
\|\mathbf{u}-(\widetilde{\boldI}_{\hbar}\circ\widetilde{\boldI}_{\hbar}^{*})\circ\boldt_{\hbar}^{(k)}(\mathbf{u})\|_{\mathcal{\widetilde{H}}_{\hbar}^{r}}<\|\mathbf{u}\|_{\mathcal{\widetilde{H}}_{\hbar}^{r}}/2\quad\mbox{for some \ensuremath{0\le k\le n},}
\]
 then we have 
\[
\|\boldt_{\hbar}^{(k)}\circ\widetilde{\boldF}_{\hbar}\circ\boldt_{\hbar}^{(k)}(\mathbf{u})\|_{\mathcal{\widetilde{H}}_{\hbar}^{r}}\ge C_{0}^{-1}\cdot\inf\left(|e^{\widetilde{V}}|\|Df|_{E_{u}}\|_{\max}^{-k}|\det Df|_{E_{u}}|^{-1/2}\right)\cdot\|\mathbf{u}\|_{\mathcal{\widetilde{H}}_{\hbar}^{r}}.
\]

\end{enumerate}
\end{prop}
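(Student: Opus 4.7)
The plan is to mimic, line by line, the proof of Proposition~\ref{prop:key_proposition2}, substituting each auxiliary result by its extended counterpart developed in Subsections~\ref{sub:Discussion-linear_model} and \ref{sub:nonlinear_extended}. First, using the skew-product decomposition $\tilde{f}_{j,i}=\tilde{a}_{j,i}\circ\tilde{g}_{j,i}\circ\widetilde{B}_{j,i}$ already obtained in the proof of Lemma~\ref{lm:bounded_F-2}, I would write each block as
\[
\widetilde{\boldF}_{j,i}=\widetilde{\prequantumL}^{(0)}\circ\widetilde{\prequantumL}^{(1)}\circ\widetilde{\prequantumL}^{(2)},
\]
where $\widetilde{\prequantumL}^{(0)}=\widetilde{\prequantumL}_{\tilde{a}_{j,i}}$, $\widetilde{\prequantumL}^{(2)}=\widetilde{\prequantumL}_{\widetilde{B}_{j,i}}$ is the linear hyperbolic transfer operator treated in Proposition~\ref{prop:prequatum_op_for_hyp_linear-1}, and $\widetilde{\prequantumL}^{(1)}=\widetilde{\prequantumL}_{\tilde{g}_{j,i}}\circ\mathcal{M}\bigl((e^{\widetilde V\circ f_G\circ\widetilde\kappa_i}\widetilde\psi_{j,i}\widetilde\chi_\hbar)\circ\widetilde{B}_{j,i}^{-1}\bigr)$ carries the nonlinear part and the truncated weight.

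Second, I would reproduce the list of observations (i)--(vii) from the proof of Proposition~\ref{prop:key_proposition2} in the extended setting: $\mathbf{t}^{(k)}_\hbar$ and $\tilde{\mathbf{t}}_\hbar$ commute with $\widetilde{\prequantumL}^{(2)}$ by Proposition~\ref{prop:prequatum_op_for_hyp_linear-1} and with $\widetilde{\prequantumL}^{(0)}$ because the translation vector of $\tilde{a}_{j,i}$ lies in $\real^{2d}\oplus\{0\}$ (so $\widetilde{\prequantumL}^{(0)}$ is an isometry on $\widetilde{\mathcal{H}}^r_\hbar(\real^{2d+d'})$ and on $\widetilde{\mathcal{H}}^{r,\pm}_\hbar(\real^{2d+d'})$, cf.\ the remark inside the proof of Lemma~\ref{lm:bounded_F-2}); Proposition~\ref{pp:bdd_g-1} gives boundedness of $\widetilde{\prequantumL}^{(1)}:\widetilde{\mathcal{H}}^{r,+}_\hbar\to\widetilde{\mathcal{H}}^r_\hbar$ and $\widetilde{\mathcal{H}}^r_\hbar\to\widetilde{\mathcal{H}}^{r,-}_\hbar$; the spectral bounds on $\widetilde{\prequantumL}^{(2)}|_{\Im\romet_{\hbar}^{(k)}}$ and $\widetilde{\prequantumL}^{(2)}|_{\Im\tilde\romet_\hbar}$ come from Proposition~\ref{prop:prequatum_op_for_hyp_linear-1}(1)(2); the approximation $\widetilde{\prequantumL}^{(1)}\circ\romet^{(k)}_\hbar=e^{\widetilde V_j}\mathcal{M}((\widetilde\psi_{j,i}\widetilde\chi_\hbar)\circ\widetilde B_{j,i}^{-1})\circ\romet^{(k)}_\hbar+O(\hbar^\epsilon)$ follows from Lemma~\ref{lem:multipliacation_by_Psi_ji-1} together with Lemma~\ref{lm:L_g_almost_identity-1}; and the near-commutation $[\widetilde{\prequantumL}^{(1)},\romet^{(k)}_\hbar]=O(\hbar^\epsilon)$ follows from Corollary~\ref{cor:L_g_almost_commutes_with_t^k-1}. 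Noting that $|\det\widetilde B_{j,i}|_{\ker Dp}|=|\det\widehat A_{j,i}|$ cancels precisely the normalizing factor $|\det(Df_G|_{\ker Dp})|^{-1}$ built into the definition~(\ref{eq:def_transfert_gutzwiller}) of $\widetilde F$, the comparison (v) reduces to
\[
C_0^{-1}\inf_{x\in M}\bigl(e^{V(x)}\|Df_x|_{E_u}\|_{\max}^{-k}|\det Df_x|_{E_u}|^{-1/2}\bigr)\le e^{\widetilde V_j}\|\widetilde B_{j,i}|_{E^+}\|_{\max}^{-k}|\det\widetilde B_{j,i}|_{E^+}|^{-1/2}\le C_0\sup_{x\in M}(\cdots),
\]
which yields pointwise upper and lower estimates for $v_{j,i}:=\romet^{(k)}_\hbar\circ\widetilde{\boldF}_{j,i}\circ\romet^{(k)}_\hbar(u_i)$ strictly analogous to (\ref{eq:vijuij})--(\ref{eq:vijuij3}).

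Third, to turn these block estimates into the commutator bound (\ref{eqn:com2-1}) and claim~(1), I would sum over $1\le i,j\le I_\hbar$ using the intersection bound (\ref{eq:arrow}) together with the extended pseudo-local property of the inner product on $\widetilde{\mathcal{H}}^r_\hbar(\real^{2d+d'})$, which follows exactly as in Lemma~\ref{lm:pseudo_local_property} from the Gaussian decay of the Bargmann kernel (\ref{eq:Bargman_Kernel}) together with the weight estimate (analogue of (\ref{eq:estmate_on_Wr})) for $\widetilde{\mathcal{W}}^r_\hbar$. For the lower bound in claim~(2), the non-cancellation inequality corresponding to (\ref{eq:claim2}) is reproduced verbatim, replacing Corollary~\ref{cor:transpose}, Corollary~\ref{cor:XT_exchange-1} and Corollary~\ref{cor:L_g_almost_commutes_with_t^k} by Corollary~\ref{cor:transpose-1}, Lemma~\ref{cor:XT_exchange-2} and Corollary~\ref{cor:L_g_almost_commutes_with_t^k-1} respectively, and the final comparison with $\|\mathbf{u}\|_{\widetilde{\mathcal{H}}^r_\hbar}$ is done exactly as in (\ref{eq:Ftu3}) using Proposition~\ref{pp1-1}.

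The main obstacle is that the coordinate change transformation $\widetilde\kappa_{j,i}$ has derivative of order $\hbar^{-(1-\beta)(1/2-\theta)-2\theta}$ along the fiber direction, which is the worst size allowed in condition (G3) of Setting II$^{\mathrm{ext}}$; in particular an arbitrary translation on $\real^{2d+d'}$ is \emph{not} an isometry of $\widetilde{\mathcal{H}}^r_\hbar$ in the singular fiber coordinates. One must therefore verify carefully that the decomposition (\ref{eq:fij_decomposition-1}) can be arranged so that $\tilde{a}_{j,i}$ translates only in the base direction $\real^{2d}\oplus\{0\}$, the residual fiber translation of size $O(\hbar^{1/2+\theta})$ permitted by condition (G2) being absorbed into $\tilde g_{j,i}$; this is exactly the point at which the specific choice of the exponent $(1-\beta)(1/2-\theta)+2\theta$ in Proposition~\ref{prop:Local-charts-1} is used, and the constraint (\ref{eq:choiceTheteExt}) guarantees that the resulting $\tilde g_{j,i}$ still satisfies Setting II$^{\mathrm{ext}}$.
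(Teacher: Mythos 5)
Your proposal is correct and follows exactly the route the paper intends: the paper's own justification for Proposition~\ref{prop:key_proposition2-1} is simply the remark that one should repeat the proof of Proposition~\ref{prop:key_proposition2}, replacing each auxiliary statement by its extended counterpart, and your line-by-line translation does precisely that. Your closing paragraph about ensuring $\tilde a_{j,i}$ translates only in the base directions (absorbing the $O(\hbar^{1/2+\theta})$ residual fiber shift into $\tilde g_{j,i}$ via condition (G2)) also correctly pinpoints the one place where the extended setting genuinely requires care and where Proposition~\ref{prop:Local-charts-1}'s choice of scaling exponent is used.
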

Once we have obtained the propositions above, we can prove the next
theorem, just in the same manner as we deduced Theorem \ref{thm:More-detailled-description_of_spectrum}
in Subsection \ref{sub:Proofs-of-Theorem_more_details}.
\begin{thm}
\label{hm:More-detailled-description_of_spectrum_ext}Let $n\geq0$
and take sufficiently large $r$ accordingly. Then there exists a
small constant $\varepsilon>0$, a constant $C_{0}$, which is independent
of $V$, $f$ and $N$, and a decomposition of the Hilbert space $\mathcal{\widetilde{H}}_{N}^{r}\left(P_{G},E_{u}\right)$
independent of $V$:
\begin{equation}
\mathcal{\widetilde{H}}_{N}^{r}\left(P_{G},E_{u}\right)=\mathcal{H}'_{0}\oplus\mathcal{H}'_{1}\ldots\oplus\mathcal{H}'_{n}\oplus\mathcal{H}'_{n+1}\label{eq:decomp_space-1}
\end{equation}
such that, writing $\tau^{\left(k\right)}$ for the projection onto
the component $\mathcal{H}'_{k}$ along other components, 
\begin{enumerate}
\item For some constant $\epsilon>0$ and $C>0$ independent of $\hbar$,
we have
\[
\left|\dim\mathcal{H}'_{k}-N^{d}\cdot Vol_{\omega}(M)\right|\le CN^{d-\epsilon}\quad\mbox{for 0\ensuremath{\le}k\ensuremath{\le}n.}
\]
 while $\dim\mathcal{H}'_{n+1}=\infty$,
\item $\|\tau^{\left(k\right)}\|<C_{0}$ for $0\le k\le n+1$, 
\item $\|\tau^{\left(k\right)}\circ\widetilde{F}_{N}\circ\tau^{\left(l\right)}\|\le C/N^{\varepsilon}$
if $k\neq\ell$, with $C$ independent on $N$ (but may depend on
$f_{G}$).
\item for $0\le k\le n+1$, it holds
\begin{equation}
\|\tau^{\left(k\right)}\circ\widetilde{F}_{N}\circ\tau^{\left(k\right)}\|_{\mathcal{\widetilde{H}}_{N}^{r}\left(P_{G},E_{u}\right)}\le C_{0}\sup_{x\in M}\left(e^{D\left(x\right)}\left\Vert Df_{x}|_{E_{u}}\right\Vert _{\mathrm{min}}^{-k}\right),\label{eq:rhs_1-1}
\end{equation}
 
\item for $0\le k\le n$ and $u\in\mathcal{H}'_{k}$ it holds 
\begin{equation}
\left\Vert \left(\tau^{\left(k\right)}\circ\widetilde{F}_{N}\right)u\right\Vert _{\mathcal{\widetilde{H}}_{N}^{r}\left(P_{G},E_{u}\right)}\ge C_{0}^{-1}\inf_{x\in M}\left(e^{D\left(x\right)}\left\Vert Df_{x}|_{E_{u}}\right\Vert _{\mathrm{max}}^{-k}\right)\left\Vert u\right\Vert _{\mathcal{\widetilde{H}}_{N}^{r}\left(P_{G},E_{u}\right)},\label{eq:rhs_2-1}
\end{equation}

\end{enumerate}
provided that $N$ is sufficiently large.
\end{thm}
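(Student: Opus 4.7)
The plan is to follow, step by step, the derivation of Theorem~\ref{thm:More-detailled-description_of_spectrum} given in Subsection~\ref{sub:Proofs-of-Theorem_more_details}, with every object replaced by its Grassmanian counterpart developed in the preceding subsections of this section. For $0 \le k \le n+1$ I set
\[
\check{\tau}_\hbar^{(k)} := \widetilde{\boldI}_\hbar^{*} \circ \boldt_\hbar^{(k)} \circ \widetilde{\boldI}_\hbar \qquad : \mathcal{\widetilde{H}}_N^r(P_G,E_u) \to \mathcal{\widetilde{H}}_N^r(P_G,E_u).
\]
From the identity $\widetilde{\boldI}_\hbar^{*} \circ \widetilde{\boldI}_\hbar = \mathrm{Id}$ (see (\ref{eq:I*_I_is_Identity-1})), completeness of the family $\{\boldt_\hbar^{(k)}\}$, and the commutator bound (\ref{eqn:com-1}) in Proposition~\ref{pp1-1}, one obtains the analogues of (\ref{eq:sum_of_check_tau})--(\ref{eq:cTcT2}), namely $\sum_k \check{\tau}_\hbar^{(k)} = \mathrm{Id}$, $\|\check{\tau}_\hbar^{(k)} \circ \check{\tau}_\hbar^{(k)} - \check{\tau}_\hbar^{(k)}\| = O(\hbar^\epsilon)$, and $\|\check{\tau}_\hbar^{(k)} \circ \check{\tau}_\hbar^{(k')}\| = O(\hbar^\epsilon)$ for $k \neq k'$, together with a uniform norm bound $\|\check{\tau}_\hbar^{(k)}\| \le C_0$.

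By the spectral mapping theorem, the spectrum of each $\check{\tau}_\hbar^{(k)}$ lies in a $\hbar^\epsilon$-neighbourhood of $\{0,1\}$. A Dunford integral around $1$, exactly as in Lemma~\ref{lm:pi_P} of the non-extended case, produces a spectral projector $\hat{\tau}_\hbar^{(k)}$ with $\|\hat{\tau}_\hbar^{(k)} - \check{\tau}_\hbar^{(k)}\| = O(\hbar^\epsilon)$. Since $\sum_j \hat{\tau}_\hbar^{(j)} = \mathrm{Id} + O(\hbar^\epsilon)$ is invertible, I may renormalize
\[
\tau^{(k)} := \hat{\tau}_\hbar^{(k)} \Bigl( \sum_{j=0}^{n+1} \hat{\tau}_\hbar^{(j)} \Bigr)^{-1},
\]
obtaining a complete system of mutually annihilating projectors satisfying $\|\tau^{(k)} - \check{\tau}_\hbar^{(k)}\| = O(\hbar^\epsilon)$. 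Setting $\mathcal{H}'_k := \mathrm{Im}\, \tau^{(k)}$ yields the decomposition (\ref{eq:decomp_space-1}), and Claim~(2) is immediate from the uniform bound on $\check{\tau}_\hbar^{(k)}$.

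The remaining claims are verified in sequence. For Claim~(1), I compute
\[
\mathrm{Tr}(\check{\tau}_\hbar^{(k)}) = \sum_{i=1}^{I_\hbar} \mathrm{Tr}\bigl( \widetilde{\boldI}_\hbar \circ \boldt_\hbar^{(k)} \circ \widetilde{\boldI}_\hbar^{*} \bigr)_{i,i}
\]
via Lemma~\ref{lm:trace_basic-1} chart-by-chart; the leading term is a Riemann sum of $\widetilde{\psi}_i(x,0)$ that, by the partition of unity (\ref{eq:partition_of_unity-1}) together with the fact that the slice $\{s=0\}$ of each $\widetilde{\kappa}_i$ parametrizes the portion of $E_u$ lying over the base chart $U_i$ (Proposition~\ref{prop:Local-charts-1}(1)--(2)), telescopes to $(2\pi\hbar)^{-d} \cdot \mathrm{Vol}_\omega(M)$ up to error $O(\hbar^{-d+\epsilon})$; the trace-norm bound $\|\tau^{(k)} - \check{\tau}_\hbar^{(k)}\|_{\mathrm{Tr}} = O(\hbar^{-d+\epsilon})$ then transfers this estimate to $\mathrm{rank}\, \tau^{(k)}$. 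Claims~(3)--(5) transfer through the isometric embedding $\widetilde{\boldI}_\hbar$: Claim~(3) combines the commutator estimate (\ref{eqn:com2-1}) with the boundedness of $\widetilde{\boldF}_\hbar$ from Lemma~\ref{lm:bounded_F-2}; Claim~(4) follows from Part~(1) of Proposition~\ref{prop:key_proposition2-1}, the case $k=n+1$ being handled by pushing through the spaces $\mathcal{\widetilde{H}}^{r,\pm}$ as in the derivation of (\ref{eq:vijuij3}); and Claim~(5) applies Part~(2) of the same proposition, after verifying that for $u \in \mathcal{H}'_k$ the vector $\boldu := \widetilde{\boldI}_\hbar(u)$ satisfies the required hypothesis $\|\boldu - (\widetilde{\boldI}_\hbar \circ \widetilde{\boldI}_\hbar^{*}) \circ \boldt_\hbar^{(k)}(\boldu)\| < \tfrac{1}{2}\|\boldu\|$, which is a direct consequence of $\|\tau^{(k)} - \check{\tau}_\hbar^{(k)}\| = O(\hbar^\epsilon)$.

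Since the substantive analytic work is already packaged in Propositions~\ref{pp1-1} and~\ref{prop:key_proposition2-1}, the main genuine difficulty is bookkeeping rather than estimation: one must keep track of the fact that $\mathcal{\widetilde{H}}_N^r(P_G,E_u)$ consists of distributions supported in a shrinking neighbourhood of the attracting section $E_u$, and verify that the truncation operator $\hat{\chi}$ built into $\widetilde{F}_N \hat{\chi}$ in Subsection~\ref{sub:Truncation-near_Eu} keeps the supports of iterates inside the region covered by the partition of unity~(\ref{eq:partition_of_unity-1}). This compatibility is guaranteed by (\ref{eq:F_Chi_n}) together with $K_1 \Subset K_0$ once $N$ is sufficiently large, so the argument goes through without additional input.
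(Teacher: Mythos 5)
Your proposal reproduces exactly what the paper does. The paper's own proof of this theorem is a single sentence — ``Once we have obtained the propositions above, we can prove the next theorem, just in the same manner as we deduced Theorem~\ref{thm:More-detailled-description_of_spectrum} in Subsection~\ref{sub:Proofs-of-Theorem_more_details}'' — and you have filled in precisely that template: define $\check\tau_\hbar^{(k)}$ via $\widetilde{\boldI}_\hbar^{*}\circ\boldt_\hbar^{(k)}\circ\widetilde{\boldI}_\hbar$, invoke Proposition~\ref{pp1-1} for the approximate projector identities and norm bounds, pass to honest projectors via the Dunford integral of Lemma~\ref{lm:pi_P}, then read off Claims~(2)--(5) from Lemma~\ref{lm:bounded_F-2} and Proposition~\ref{prop:key_proposition2-1}. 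Your closing remark about the truncation $\hat\chi$ and the supports of iterates correctly addresses the compatibility that the paper acknowledges only in a remark following Corollary~\ref{cor:9.30}.

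One point deserves more care in your verification of Claim~(1). You assert that the chart-by-chart trace telescopes to $(2\pi\hbar)^{-d}\mathrm{Vol}_\omega(M)$ for \emph{every} $0\le k\le n$. But Lemma~\ref{lm:trace_basic-1}, which you invoke, contributes the factor $r(k,d)=\binom{d+k-1}{d-1}$ (and in the extended setting $T^{(k)}$ acts on $\mathbb{R}^{d+d'}$, so $\mathrm{rank}\,T^{(k)}=\binom{d+d'+k-1}{d+d'-1}$, which is what actually emerges from the extended version of the proof of Lemma~\ref{lm:trace_basic}). For $k=0$ this factor is $1$ and your computation is clean, but for $k\ge 1$ the same chart-by-chart argument gives $r(k,\cdot)\cdot N^d\mathrm{Vol}_\omega(M)$, not $N^d\mathrm{Vol}_\omega(M)$. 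The theorem as printed omits the factor, so you are faithfully mirroring the statement; but your proof sketch silently drops a factor that the cited lemma produces, and one should either carry the factor through or flag that Claim~(1) as stated appears to hold literally only for the outermost band $k=0$ (which is the band actually used in Theorems~\ref{thm:weyl_law_extended} and~\ref{cor:The-special-choice_V0}).
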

Now we can deduce Theorem \ref{thm:Discrete-spectrum-F_G} and Theorem
\ref{thm:band_structure-1} from the theorem above by the argument
parallel to that in Subsection \ref{sub:Proof-of-Theorem_Nad_annuli}.
(But see the remark below.) The former part of the statements in Theorem
\ref{cor:The-special-choice_V0} is an immediate sequence of Theorem
\ref{thm:Discrete-spectrum-F_G} and Theorem \ref{thm:band_structure-1}.
The proof of the angular equidistribution law in Theorem \ref{cor:The-special-choice_V0}
is parallel to that of Theorem \ref{Thm:Distribution-of-resonances.},
which we will present in Subsection \ref{sub:Proof-of-equidistribution}.
\begin{rem*}
The Hilbert space $\mathcal{\widetilde{H}}_{N}^{r}\left(P_{G},E_{u}\right)$
consists of distributions supported on a small neighborhood of the
attractor $E_{u}$ depending on $\hbar>0$. To get the Hilbert space
$\mathcal{H}_{\hbar}^{r}(P_{G})$ in the statement of Theorem \ref{thm:Discrete-spectrum-F_G}
and Theorem \ref{thm:band_structure-1}, we need a little formal argument
to construct $\mathcal{H}_{\hbar}^{r}(P_{G})$ from $\mathcal{\widetilde{H}}_{N}^{r}\left(P_{G},E_{u}\right)$
so that $\mathcal{H}_{\hbar}^{r}(P_{G})$ contains $C_{N}^{\infty}(P_{G})$
and that the operator $\widetilde{F}_{N}\circ\hat{\chi}$ on $\mathcal{H}_{\hbar}^{r}(P_{G})$
has the same spectral property as that on $\mathcal{\widetilde{H}}_{N}^{r}\left(P_{G},E_{u}\right)$.
But, recalling the argument in Subsection \ref{sub:Truncation-near_Eu}
on the absorbing neighborhoods of the attractor $E_{u}$ and noting
the precomposition of the operator $\hat{\chi}$ in $\widetilde{F}_{N}\circ\hat{\chi}$
, such an argument can be provided easily in various ways. 
\end{rem*}

\subsection{A remark about the relation between the transfer operators $\hat{F}_{N}$
and $\widetilde{F}_{N}$ (and a proof of Theorem \ref{thm:weyl_law_extended}).}

The prequantum transfer operator $\widetilde{F}_{N}$ on $G$ is an
extension of the prequantum transfer operator $\hat{F}_{N}$. But
the relation between the spectrum of $\widetilde{F}_{N}$ and that
of $\hat{F}_{N}$ is not clear in general. In this subsection, we
relate the spectra of the transfer operators $\widetilde{F}_{N}$
and $\hat{F}_{N}$ in the respective ``outermost'' bands. This will
reduce Theorem \ref{thm:weyl_law_extended} (and the corresponding
part of Theorem \ref{cor:The-special-choice_V0}) to Theorem \ref{thm:Weyl-formula}. 

Below we suppose that the potential function $V$ and $\tilde{V}$
in their definitions are related as 
\begin{equation}
\tilde{V}=V\circ p\label{eq:assumtion_on_tildeV}
\end{equation}
and that they are both smooth%
\footnote{Good part of the argument below holds without the assumption on smoothness
of the potential $V$.%
}. We also assume $r_{1}^{+}<r_{0}^{-}$ so that the outermost annuli
in Theorem \ref{thm:band_structure} and Theorem \ref{thm:Discrete-spectrum-F_G}
are separated from the inner annuli. Let us consider the pull-back
operator 
\[
p^{*}:C_{N}^{\infty}(P)\to C_{N}^{\infty}(P_{G}),\quad p^{*}u(z)=u(p(x))
\]
by the projection $p:G\to M$, and its dual 
\[
p_{*}:(C_{N}^{\infty}(P_{G}))'\to(C_{N}^{\infty}(P))',\quad\langle p_{*}v,u\rangle=\langle v,p_{*}u\rangle
\]
Let $\mathcal{\widetilde{H}}_{\hbar}^{r}(P_{G},E_{u})_{0}$ be the
closed subspace of the Hilbert space $\mathcal{\widetilde{H}}_{\hbar}^{r}(P_{G},E_{u})$
that consists of elements supported on (the image of) the section
$E_{u}$. 
\begin{prop}
\label{prop:properties_of_pi_*}The operator $p_{*}$ above restricts
to a bounded operator 
\begin{equation}
p_{*}:\mathcal{\widetilde{H}}_{\hbar}^{r}(P_{G},E_{u})_{0}\to\mathcal{H}_{\hbar}^{r}(P)\label{eq:boundedness_of_pi_*}
\end{equation}
and the following diagram commutes:\textup{
\[
\begin{CD}\mathcal{\widetilde{H}}_{\hbar}^{r}(P_{G},E_{u})_{0}@>\widetilde{F}_{N}>>\mathcal{\widetilde{H}}_{\hbar}^{r}(P_{G},E_{u})_{0}\\
@Vp_{*}VV@Vp_{*}VV\\
\mathcal{H}_{\hbar}^{r}(P)@>\hat{F}_{N}>>\mathcal{H}_{\hbar}^{r}(P)
\end{CD}
\]
Further we have that}\end{prop}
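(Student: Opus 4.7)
The plan is to establish (i) boundedness of $p_*$ on the subspace $\mathcal{\widetilde{H}}_{\hbar}^{r}(P_{G},E_{u})_{0}$ and (ii) commutativity of the diagram by working in the local coordinate charts $\widetilde{\kappa}_i$ constructed in Proposition \ref{prop:Local-charts-1}. In those charts the attractor $E_u$ becomes the coordinate hyperplane $\{s=0\}$, and any element of $\mathcal{\widetilde{H}}_{\hbar}^{r}(P_{G},E_{u})_{0}$ is represented locally by a finite collection of distributions of the form $w_i(x)\otimes\delta(s)$. The projection $p$ in these charts is just $(x,s)\mapsto x$, so $p_*(w_i(x)\otimes\delta(s))=w_i(x)$ and the gluing back to a distribution on $P$ is compatible (up to transition maps in $\mathcal{A}$, which act isometrically on both spaces).

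For the boundedness (i), the key is a local norm comparison. Using the product Bargmann transform $\mathcal{B}_{(x,s)}=\mathcal{B}_x\otimes\mathcal{B}_s$ introduced in \eqref{eq:Bargmann_extended}, the image $\mathcal{B}_{(x,s)}(w\otimes\delta)$ factors as $(\mathcal{B}_x w)\cdot(\mathcal{B}_s\delta)$, where $\mathcal{B}_s\delta$ is an explicit Gaussian wave packet in the variables $(s,\xi_s)$ concentrated near $s=0$ with arbitrary spreading in $\xi_s$. Under the coordinate change $\widetilde\Phi$ this corresponds to concentration along the subspace $\real^{2d}\oplus\{0\}$ of the $\zeta$-variables. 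The escape function $\widetilde{\mathcal{W}}_\hbar^r$ in Definition \ref{def:escape_function_Hr-1} and the escape function $\mathcal{W}_\hbar^r$ of Section \ref{ss:anisoSob} are then related, on this subspace, by a fixed power of $\hbar$ coming from the anisotropic scaling $\hbar^{(1-\beta)(1/2-\theta)+2\theta}$ in the fiber direction; the ratio of the two $L^2$-norms is controlled uniformly by a simple Gaussian integral over $\xi_s$. This yields a bound $\|p_*v\|_{\mathcal{H}_\hbar^r(P)}\le C\|v\|_{\widetilde{\mathcal{H}}_\hbar^r(P_G,E_u)}$ with $C$ independent of $\hbar$, after summing over the (bounded multiplicity) charts via the argument used in Definition \ref{dfn:global_anisotropic_Sobolev_space-1}.

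For the commutativity (ii), the argument is essentially algebraic, modulo the support issue. Since $E_u$ is the attractor and $f_G$-invariant, for any $v$ supported on $E_u$ the distribution $\widetilde{F}_N v$ is again supported on $E_u$. In local charts where $v=w(x)\otimes\delta(s)$, the pull-back $v\circ\tilde{f}_G^{-1}$ is a distribution supported on $\tilde{f}_G(\{s=0\})=\{s=0\}$, and a direct computation using the skew-product structure $\tilde f_G(x,s)=(\tilde f(x),\hat f(x,s))$ shows that $v\circ\tilde f_G^{-1}$ equals $|\det(Df_G|_{\ker Dp})|\cdot(w\circ\tilde f^{-1})(x)\otimes\delta(s)$. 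Multiplying by $e^{\tilde V}\cdot|\det(Df_G|_{\ker Dp})|^{-1}$ as in the definition \eqref{eq:def_transfert_gutzwiller} cancels the Jacobian factor, and using $\tilde V=V\circ p$ from \eqref{eq:assumtion_on_tildeV} one obtains $\widetilde F_N v = (e^V\cdot w\circ\tilde f^{-1})\otimes\delta(s)$. Applying $p_*$ gives exactly $\hat F_N(p_*v)$.

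The main obstacle is item (i): making the norm comparison between the two anisotropic Sobolev spaces fully rigorous in the presence of the two different but interrelated weight functions $\widetilde{\mathcal{W}}_\hbar^r$ and $\mathcal{W}_\hbar^r$ and the non-standard fiber scaling. The subtlety is that $\mathcal{B}_s\delta$ is not an $L^2$ function on $\real^{d'}_s\oplus\real^{d'}_{\xi_s}$ without a weight, so the estimate only closes because the escape function $\widetilde{\mathcal{W}}_\hbar^r$ provides decay in $(s,\xi_s)$ of order $r>d'$. This is precisely the regime ensured by our choice of $r$ in \eqref{eq:choice_of_r_2-1}; a careful bookkeeping of the $\hbar$-powers (using Lemma \ref{lm:boundedness_of_molifier} in the extended setting) closes the estimate. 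The commutativity part (ii) is essentially formal once supports are handled correctly, thanks to the cancellation built into the definition of $\widetilde F$.
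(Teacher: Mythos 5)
Your overall approach -- passing to the local charts $\widetilde{\kappa}_i$, using the product structure $\mathcal{B}_{(x,s)}=\mathcal{B}_x\otimes\mathcal{B}_s$ of the Bargmann transform, and comparing the two escape functions -- is the same as the paper's, which relies on Lemma \ref{lem:proj_and_spectral_projector_to_first_band} plus the localization machinery of Section \ref{sec:Proofs-of-the_mains}. However there is a genuine gap in your boundedness argument.

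You assert that every element of $\mathcal{\widetilde{H}}_{\hbar}^{r}(P_{G},E_{u})_{0}$ is locally of the form $w(x)\otimes\delta(s)$. This is not justified and is likely false: a distribution supported on $\{s=0\}$ can carry transverse derivatives, e.g.\ $w(x)\otimes\partial_s^\alpha\delta(s)$, and since $r>n+2+2(2d+d')$ is large, such distributions can have finite $\mathcal{\widetilde{H}}_{\hbar}^{r}$-norm (the weight $\widetilde{\mathcal{W}}_\hbar^r$ still dominates the polynomial growth in $\xi_s$ that the derivative produces). Your factorization argument for the Bargmann lift only covers the pure $\delta$ case, so it does not prove boundedness of $p_*$ on the whole subspace, which is what the proposition claims. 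The paper avoids this by proving the stronger statement that $p_*:\mathcal{\widetilde{H}}_\hbar^r(\real^{2d+d'})\to\mathcal{H}_\hbar^r(\real^{2d})$ is bounded on the \emph{entire} anisotropic Sobolev space (not just on distributions supported on $\{s=0\}$), via the Bargmann-conjugated operator
$P_{*}v(x,\xi_{x})=a_{d'}^{-1}(2\pi\hbar)^{-d'/2}\int v(x,s,\xi_{x},\xi_{s})\exp(-|\xi_{s}|^{2}/(2\hbar))\frac{ds\,d\xi_{s}}{(2\pi\hbar)^{d'}}$
together with the pointwise escape-function estimate
$\mathcal{W}_{\hbar}^{r}(x,\xi_{x})^{2}\exp(-\hbar^{-1}|\xi_{s}|^{2})\le C\,\widetilde{\mathcal{W}}_{\hbar}^{r}(x,s,\xi_{x},\xi_{s})^{2}$
and Cauchy--Schwarz. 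No structural decomposition of the domain is needed. Relatedly, your claim that the two escape functions are related ``by a fixed power of $\hbar$ coming from the anisotropic scaling $\hbar^{(1-\beta)(1/2-\theta)+2\theta}$'' is not right: that scaling lives in the chart map $\widetilde{\kappa}_i$, not in the definition of $\widetilde{\mathcal{W}}_\hbar^r$ (Definition \ref{def:escape_function_Hr-1}), and the constant $C$ in the pointwise inequality is genuinely $\hbar$-independent.

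Your commutativity computation (the skew-product Jacobian cancelling the explicit $|\det(Df_G|_{\ker Dp})|^{-1}$ in the definition of $\widetilde{F}$, using $\tilde V=V\circ p$) captures the right mechanism, and as a sanity check it reproduces the correct result on test elements of the form $w\otimes\delta(s)$. But for the same reason as above it is not a complete proof on $\mathcal{\widetilde{H}}_{\hbar}^{r}(P_{G},E_{u})_{0}$; the cleanest route is to verify the identity $p_*\circ\widetilde{F}_N=\hat{F}_N\circ p_*$ directly from the duality defining $p_*$, the relation $p\circ f_G=f\circ p$, and $\tilde V=V\circ p$, which works for arbitrary distributions supported on $E_u$ and is what the commutative diagram in Lemma \ref{lem:proj_and_spectral_projector_to_first_band} encodes at the linear level.
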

\begin{enumerate}
\item The generalized eigenvectors of $\widetilde{F}_{N}:\mathcal{\widetilde{H}}_{\hbar}^{r}(P_{G},E_{u})\to\mathcal{\widetilde{H}}_{\hbar}^{r}(P_{G},E_{u})$
for the eigenvalues in the outmost band $\{r_{0}^{-}-\epsilon<|z|<r_{0}^{+}+\epsilon\}$
is contained in $\mathcal{\widetilde{H}}_{\hbar}^{r}(P_{G},E_{u})_{0}$
and its image by $p_{*}$ does not vanish. 
\item The image of $p_{*}$ in (\ref{eq:boundedness_of_pi_*}) contains
$C_{N}^{\infty}(P)\subset\mathcal{H}_{\hbar}^{r}(P)$.
\end{enumerate}
Before proving this proposition, we give the following consequence. 
\begin{cor}
\label{cor:9.30}Under the same assumptions as in \ref{prop:properties_of_pi_*},
the spectrum of the operators $\hat{F}_{N}:\mathcal{H}_{\hbar}^{r}(P)\to\mathcal{H}_{\hbar}^{r}(P)$
and \textup{$\widetilde{F}_{N}:\mathcal{\widetilde{H}}_{\hbar}^{r}(P_{G},E_{u})\to\mathcal{\widetilde{H}}_{\hbar}^{r}(P_{G},E_{u})$
in the respective outermost annulus $\{r_{0}^{-}-\epsilon<|z|<r_{0}^{+}+\epsilon\}$
with sufficiently small $\epsilon>0$ coincides up to multiplicity
(provided $\hbar$ is sufficiently small according to $\epsilon$).}\end{cor}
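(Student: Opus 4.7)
The plan is to exhibit $p_{*}$ as a linear isomorphism
$p_{*}:\mathcal{H}_{\hbar}^{G}\xrightarrow{\sim}\mathcal{H}_{\hbar}$,
where $\mathcal{H}_{\hbar}^{G}:=\mathrm{Im}(\widetilde{\Pi}_{\hbar})$ and $\mathcal{H}_{\hbar}:=\mathrm{Im}(\Pi_{\hbar})$ denote the (finite rank) spectral subspaces of $\widetilde{F}_{N}$ and $\hat{F}_{N}$ for their respective outermost annulus $\{r_{0}^{-}-\epsilon<|z|<r_{0}^{+}+\epsilon\}$. Since $p_{*}$ intertwines the two transfer operators by Proposition~\ref{prop:properties_of_pi_*}, such an isomorphism will conjugate $\widetilde{F}_{N}|_{\mathcal{H}_{\hbar}^{G}}$ to $\hat{F}_{N}|_{\mathcal{H}_{\hbar}}$ and thereby yield coincidence of the two spectra with multiplicities.

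First I would verify that $p_{*}|_{\mathcal{H}_{\hbar}^{G}}$ is well-defined and injective. By Proposition~\ref{prop:properties_of_pi_*}(1), $\mathcal{H}_{\hbar}^{G}\subseteq\mathcal{\widetilde{H}}_{\hbar}^{r}(P_{G},E_{u})_{0}$, so $p_{*}$ is bounded on $\mathcal{H}_{\hbar}^{G}$ by~(\ref{eq:boundedness_of_pi_*}); the image $W:=p_{*}(\mathcal{H}_{\hbar}^{G})$ is finite dimensional, $\hat{F}_{N}$-invariant by the intertwining, and has spectrum contained in the outermost annulus, so $W\subseteq\mathcal{H}_{\hbar}$. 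The kernel of $p_{*}|_{\mathcal{H}_{\hbar}^{G}}$ is $\widetilde{F}_{N}$-invariant, since $p_{*}\tilde{u}=0$ implies $p_{*}\widetilde{F}_{N}\tilde{u}=\hat{F}_{N}p_{*}\tilde{u}=0$; if it were non-zero it would contain a generalized eigenvector of $\widetilde{F}_{N}$ for an outermost eigenvalue on which $p_{*}$ vanishes, contradicting Proposition~\ref{prop:properties_of_pi_*}(1). Thus $p_{*}:\mathcal{H}_{\hbar}^{G}\to W$ is a bijection conjugating the restricted operators.

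For the surjectivity $W=\mathcal{H}_{\hbar}$, the crucial step is the projector intertwining
$p_{*}\circ\widetilde{\Pi}_{\hbar}=\Pi_{\hbar}\circ p_{*}$ on $\mathcal{\widetilde{H}}_{\hbar}^{r}(P_{G},E_{u})_{0}$.
Granted this identity, any $u\in C_{N}^{\infty}(P)$ lifts by Proposition~\ref{prop:properties_of_pi_*}(2) to some $\tilde{u}\in\mathcal{\widetilde{H}}_{\hbar}^{r}(P_{G},E_{u})_{0}$ with $p_{*}\tilde{u}=u$, whence $\Pi_{\hbar}u=p_{*}\widetilde{\Pi}_{\hbar}\tilde{u}\in W$. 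Density of $C_{N}^{\infty}(P)$ in $\mathcal{H}_{\hbar}^{r}(P)$, continuity of $\Pi_{\hbar}$, and finite dimensionality of $\mathcal{H}_{\hbar}$ then give $\Pi_{\hbar}(C_{N}^{\infty}(P))=\mathcal{H}_{\hbar}$, so $\mathcal{H}_{\hbar}\subseteq W$ and the isomorphism is complete.

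The main obstacle is the projector intertwining. Morally it follows from the resolvent identity $p_{*}\circ(z-\widetilde{F}_{N})^{-1}=(z-\hat{F}_{N})^{-1}\circ p_{*}$ together with Cauchy's formula on a contour $\gamma$ separating the outermost annulus from the rest of both spectra; the subtlety is that $(z-\widetilde{F}_{N})^{-1}$ lives on the larger space $\mathcal{\widetilde{H}}_{\hbar}^{r}(P_{G},E_{u})$ and need not preserve the subspace $\mathcal{\widetilde{H}}_{\hbar}^{r}(P_{G},E_{u})_{0}$ on which $p_{*}$ is controlled. I would handle this by passing to the restricted operator $\widetilde{F}_{N}|_{\mathcal{\widetilde{H}}_{\hbar}^{r}(P_{G},E_{u})_{0}}$, which is well defined thanks to invariance of this subspace (the commuting diagram in Proposition~\ref{prop:properties_of_pi_*}), running the holomorphic functional calculus inside $\mathcal{\widetilde{H}}_{\hbar}^{r}(P_{G},E_{u})_{0}$ to define a restricted spectral projector, and observing via Proposition~\ref{prop:properties_of_pi_*}(1) that all generalized eigenvectors of $\widetilde{F}_{N}$ for outermost eigenvalues already lie in $\mathcal{\widetilde{H}}_{\hbar}^{r}(P_{G},E_{u})_{0}$, so that this restricted projector agrees with $\widetilde{\Pi}_{\hbar}|_{\mathcal{\widetilde{H}}_{\hbar}^{r}(P_{G},E_{u})_{0}}$ and the resolvent intertwining applies at the restricted level.
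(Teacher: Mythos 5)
Your proposal is correct, but it takes a genuinely different route from the paper's for the "converse" inclusion $\mathcal{H}_{\hbar}\subseteq W$. The paper invokes an abstract principle (cited from the appendix of \cite{Baladi-Tsujii08}): given two Hilbert spaces containing $C_{N}^{\infty}(P)$ as a common dense subspace on which $\hat{F}_{N}$ acts boundedly in both, the discrete spectra outside the essential spectral radius coincide; they apply this to $\mathcal{H}_{\hbar}^{r}(P)$ and $\mathrm{Im}\,p_{*}\cong\mathcal{\widetilde{H}}_{\hbar}^{r}(P_{G},E_{u})_{0}/\ker p_{*}$. You instead verify directly that $p_{*}$ restricts to a linear isomorphism of spectral subspaces, via the projector intertwining $p_{*}\circ\widetilde{\Pi}_{\hbar}=\Pi_{\hbar}\circ p_{*}$. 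Both are valid; your version is more hands-on and avoids the external citation, while the paper's is shorter because the general lemma is already in the literature.

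On the one point you flag as the "main obstacle": your proposed fix works, but the justification you give is slightly incomplete. You need the contour $\gamma$ around the outermost annulus to lie in the resolvent set of the \emph{restricted} operator $\widetilde{F}_{N}|_{\mathcal{\widetilde{H}}_{\hbar}^{r}(P_{G},E_{u})_{0}}$, and in general a closed invariant subspace can pick up extra spectrum. Here this cannot happen, for the following reason which goes a bit beyond what you wrote: $\widetilde{\Pi}_{\hbar}$ maps into $\mathcal{H}_{\hbar}^{G}\subset\mathcal{\widetilde{H}}_{\hbar}^{r}(P_{G},E_{u})_{0}$ (by Prop.~\ref{prop:properties_of_pi_*}(1)), hence it preserves the subspace, and so the subspace splits as $\mathcal{H}_{\hbar}^{G}\oplus\bigl((\mathrm{Id}-\widetilde{\Pi}_{\hbar})\mathcal{\widetilde{H}}_{\hbar}^{r}(P_{G},E_{u})_{0}\bigr)$, invariantly. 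On the first summand the restricted spectrum lies in the outermost annulus; on the second, since $\bigl\|\bigl(\widetilde{F}_{N}(\mathrm{Id}-\widetilde{\Pi}_{\hbar})\bigr)^{n}\bigr\|\le C(r_{1}^{+}+\varepsilon)^{n}$ and restriction to a closed invariant subspace cannot increase operator-norm powers, the restricted spectral radius is at most $r_{1}^{+}+\varepsilon$. Thus $\gamma$ does miss the spectrum of the restriction, the restricted resolvent agrees with the global one on the subspace, and the Cauchy integral then yields $p_{*}\circ\widetilde{\Pi}_{\hbar}|_{\mathcal{\widetilde{H}}_{\hbar}^{r}(P_{G},E_{u})_{0}}=\Pi_{\hbar}\circ p_{*}$ as needed. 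An alternative, avoiding functional calculus entirely: set $S:=\Pi_{\hbar}p_{*}(\mathrm{Id}-\widetilde{\Pi}_{\hbar})$; the intertwining gives $\hat{F}_{N}^{n}S=S\bigl(\widetilde{F}_{N}(\mathrm{Id}-\widetilde{\Pi}_{\hbar})\bigr)^{n}$, hence $\|S\|\le C(r_{0}^{-}-\varepsilon)^{-n}(r_{1}^{+}+\varepsilon)^{n}\|S\|\to0$, so $S=0$, which is the projector intertwining. Either way your argument closes.
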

\begin{proof}
Proposition \ref{prop:properties_of_pi_*} tells that, for a generalized
eigenvector $v$ of $\widetilde{F}_{N}:\mathcal{\widetilde{H}}_{\hbar}^{r}(P_{G},E_{u})\to\mathcal{\widetilde{H}}_{\hbar}^{r}(P_{G},E_{u})$
for an eigenvalue in the outermost band, its image $p_{*}(v)$ is
a generalized eigenvector of $\hat{F}_{N}:\mathcal{H}_{\hbar}^{r}(P)\to\mathcal{H}_{\hbar}^{r}(P)$
for the same eigenvalue. Thus the eigenvalues of $\widetilde{F}_{N}:\mathcal{\widetilde{H}}_{\hbar}^{r}(P_{G},E_{u})\to\mathcal{\widetilde{H}}_{\hbar}^{r}(P_{G},E_{u})$
in the outmost band is contained in those of $\hat{F}_{N}:\mathcal{H}_{\hbar}^{r}(P)\to\mathcal{H}_{\hbar}^{r}(P)$
up to multiplicity%
\footnote{That is, the multiplicity of the eigenvalues of the former is not
greater than the latter.%
}. 

We prove the converse. The image $\mathrm{Im}p_{*}$ of $p_{*}$ in
(\ref{eq:boundedness_of_pi_*}) can be identified with the quotient
space $ $ $\mathcal{\widetilde{H}}_{\hbar}^{r}(P_{G},E_{u})/\ker p_{*}$.
By this mean, we regard $\mathrm{Im}p_{*}$ as a Hilbert space, which
contains $C_{N}^{\infty}(P)\subset\mathcal{H}_{\hbar}^{r}(P)$ from
Proposition \ref{prop:properties_of_pi_*}(2). Thus we have two Hilbert
spaces $\mathcal{H}_{\hbar}^{r}(P)$ and $\mathrm{Im}p_{*}$ which
contains $C_{N}^{\infty}(P)$ in common as dense subsets and the operator
$\hat{F}_{N}$ act on both of the Hilbert space boundedly as natural
extensions of its action on $C_{N}^{\infty}(P)$. Then, by a general
argument (see Appendix of \cite{Baladi-Tsujii08} for instance), the
discrete spectra of the operators $\hat{F}_{N}:\mathcal{H}_{\hbar}^{r}(P)\to\mathcal{H}_{\hbar}^{r}(P)$
and $\hat{F}_{N}:\mathrm{Im}p_{*}\to\mathrm{Im}p_{*}$ coincide up
to multiplicity on the outside of their essential spectral radius.
This implies that the eigenvalues of $\hat{F}_{N}:\mathcal{H}_{\hbar}^{r}(P)\to\mathcal{H}_{\hbar}^{r}(P)$
in the outermost band is contained in those of $\widetilde{F}_{N}:\mathcal{\widetilde{H}}_{\hbar}^{r}(P_{G},E_{u})\to\mathcal{\widetilde{H}}_{\hbar}^{r}(P_{G},E_{u})$
up to multiplicity.
\end{proof}
With this corollary, we finish the proof of Theorem \ref{thm:weyl_law_extended}.
\begin{proof}[Proof of Theorem \ref{thm:weyl_law_extended}]
 Theorem \ref{thm:weyl_law_extended} follows from Corollary \ref{cor:9.30}
and Theorem \ref{thm:Weyl-formula}, if the potential function $\tilde{V}$
satisfies (\ref{eq:assumtion_on_tildeV}) for some smooth function
$V$ close to $V_{0}$. But the rank of the spectral projection operator
for the outermost band does not depend on the potential function $\tilde{V}$
(provided that the outermost band is isolated) because it coincides
with the number of the eigenvalues of ${\boldt}_{\hbar}^{(0)}$ in
a small neighborhood of $1$. So we get Theorem \ref{thm:weyl_law_extended}.
\end{proof}
For the proof of Proposition \ref{prop:properties_of_pi_*}, we first
prove the following simpler version in the linearized setting. 
\begin{lem}
\label{lem:proj_and_spectral_projector_to_first_band} The operator
$p_{*}:\mathcal{S}(\real^{2d+d'})\to\mathcal{S}(\real^{2d})$ defined
by 
\[
p_{*}u(x)=\int u(x,s)ds
\]
extends to a bounded operator $p_{*}:\mathcal{\widetilde{H}}_{\hbar}^{r}(\real^{2d+d'})\to\mathcal{H}_{\hbar}^{r}(\real^{2d})$
and the operator norm is bounded by a constant independent of $\hbar$.
It restricts to a bijection between the subspaces $H'_{0}$ in Proposition
\ref{prop:prequatum_op_for_hyp_linear} and $E'_{0}$ in Proposition
\ref{prop:prequatum_op_for_hyp_linear-1}. Further there exists a
constant $K$ independent of $\hbar$ such that 
\[
\|p_{*}(u)\|_{\mathcal{\widetilde{H}}_{\hbar}^{r}(\real^{2d+d'})}=K\cdot\|u\|_{\mathcal{H}_{\hbar}^{r}(\real^{2d})}
\]
for all $u\in H'_{0}=\mathrm{Im}\, t_{\hbar}^{(0)}$.\end{lem}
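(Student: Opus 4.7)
The strategy is to exploit the tensor-product structure provided by the unitary intertwiners $\mathcal{U}$ of Proposition \ref{prop:4.5} and $\widetilde{\mathcal{U}}$ of Proposition \ref{prop:Product_expression_of_tildeL}, which give the identifications
\begin{equation*}
\mathcal{H}_\hbar^r(\real^{2d}) \cong L^2(\real^d_{\nu_q}) \otimes H_\hbar^r(\real^d_{\zeta_p}), \qquad \widetilde{\mathcal{H}}_\hbar^r(\real^{2d+d'}) \cong L^2(\real^d_{\nu_q}) \otimes H_\hbar^r(\real^{d+d'}_{(\zeta_p,\xi_s)}).
\end{equation*}
The key observation is that through these isomorphisms the operator $p_*$ takes the form $\textrm{Id}_{L^2(\real^d_{\nu_q})} \otimes P_\hbar$ for some operator $P_\hbar \colon H_\hbar^r(\real^{d+d'}) \to H_\hbar^r(\real^d)$. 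Indeed, the coordinate change $\widetilde{\Phi}$ factorizes as $\Phi \times \textrm{Id}_{(s,\xi_s)}$, and the variables $\nu_q,\nu_p,\zeta_p,\zeta_q$ depend only on $(x,\xi_x)$, so integration in $s$ commutes with all operations on the $\nu_q$-factor. Moreover, since the symplectic form on $\real^{2d+2d'}_{(\zeta_p,\zeta_q,s,\xi_s)}$ splits as a direct sum, one has the further factorization $\mathcal{B}_{(\zeta_p,\xi_s)} = \mathcal{B}_{\zeta_p} \otimes \mathcal{B}_{\xi_s}$, and a short computation then identifies $P_\hbar$ as pairing (on the Bargmann side) with the constant function $1$ in the $\xi_s$ variable, up to an explicit Gaussian normalization factor.

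First I would verify this factorization by direct computation on Schwartz tensors $u = v \otimes w$ with $v \in \mathcal{S}(\real^{2d})$ and $w \in \mathcal{S}(\real^{d'})$, for which $p_*(v \otimes w) = v \cdot \int w$, and then track how the Bargmann transforms on the two sides of $\widetilde{\mathcal{U}}$ intertwine with $p_*$ --- the relevant scalar factor in the $(s,\xi_s)$ direction being the Gaussian integral $\int \phi^{(d')}_{s,\xi_s}(t)\,dt$, which is proportional to $e^{-(i/(2\hbar))\xi_s s - \xi_s^2/(2\hbar)}$.

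Second, to show that $P_\hbar$ is bounded with $\hbar$-independent operator norm, I would first reduce to the case $\hbar = 1$ via the scaling in Subsection \ref{sub:Scaling}, then estimate the Schwartz kernel of $P_1$ on the Bargmann side. This kernel carries a Gaussian factor in $\xi_s$; the weight $W^r$ decays like $\langle|\xi_s|\rangle^{-r}$ in the cone $\mathbf{C}_+$ where $|s| \ll |\xi_s|$, so that the resulting weighted kernel can be controlled by the Schur-type estimate of Lemma \ref{lm:boundedness_of_molifier}, provided $r$ satisfies the standing hypothesis $r > n + 2 + 2(2d+d')$. For the bijection $H'_0 \cong E'_0$ and the isometry up to a constant, we note that by construction $H'_0 = \mathcal{U}(L^2(\real^d_{\nu_q}) \otimes \mathrm{Im}\,T^{(0)})$ and $E'_0 = \widetilde{\mathcal{U}}(L^2(\real^d_{\nu_q}) \otimes \mathrm{Im}\,T^{(0)})$, where $\mathrm{Im}\,T^{(0)}$ is in each case the one-dimensional space of constant functions. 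Under the factorization $p_* \simeq \textrm{Id} \otimes P_\hbar$, the operator $P_\hbar$ sends the constant $1$ on $\real^{d+d'}$ to a nonzero scalar $K$ times the constant $1$ on $\real^d$, the scalar $K$ being given by an explicit Gaussian integral independent of $\hbar$. This simultaneously gives bijectivity of $p_* \colon E'_0 \to H'_0$ and the identity
\begin{equation*}
\|p_*(u)\|_{\mathcal{H}_\hbar^r(\real^{2d})} = K \cdot \|u\|_{\widetilde{\mathcal{H}}_\hbar^r(\real^{2d+d'})} \qquad \text{for } u \in E'_0.
\end{equation*}

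The main technical obstacle will be making the factorization $p_* \simeq \textrm{Id}_{L^2(\real^d_{\nu_q})} \otimes P_\hbar$ fully rigorous: the Bargmann transforms appearing in $\mathcal{U}$ and $\widetilde{\mathcal{U}}$ act on spaces of different dimensions, and one must carefully track how the coordinate change $\widetilde{\Phi}$ intertwines them, keeping careful account of all the Gaussian normalization factors $a_D$. Once this is in place, both the boundedness estimate via Schur and the identification of the isometry constant $K$ as an explicit Gaussian integral follow routinely from the explicit form of $P_\hbar$.
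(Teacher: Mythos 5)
Your overall strategy — exploit the tensor-product structure coming from the unitary intertwiners, write $p_*$ in the form $\mathrm{Id}_{L^2(\real^d_{\nu_q})}\otimes P_\hbar$, and identify $P_\hbar$ on the $1$-dimensional space of constants to get bijectivity and the constant $K$ — is exactly what the paper does. However, there are two genuine issues in the way you carry this out.

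The main gap is the identification of $P_\hbar$. You claim it is "pairing (on the Bargmann side) with the constant function $1$ in the $\xi_s$ variable." Taken literally, this means $P_\hbar v(\zeta_p) = \int v(\zeta_p,\xi_s)\,d\xi_s$, i.e.\ \emph{integration} over $\xi_s$. But applied to a constant function (the image of $T^{(0)}$), this integral diverges, so your constant $K$ would be infinite and the bijection $E'_0 \to H'_0$ would fail. The correct identification is not integration but \emph{restriction}: the operator that $p_*$ induces on the $(\zeta_p,\xi_s)$-factor is $p_*'v(\zeta_p) = v(\zeta_p,0)$. Heuristically, integration over $s$ corresponds to evaluation at the zero dual frequency, and the coordinate change $\widetilde{\Phi}$ treats $\xi_s$ as the "position" and $s$ as its dual on the $d'$-dimensional factor (the Bargmann transform $\Bargmann_{\xi_s}$ has $\xi_s$ as position and $s$ as momentum), so the composition $\Bargmann_s^*\circ\widetilde{\Phi}^*\circ\Bargmann_{\xi_s}$ effectively implements a Fourier-type rotation that converts $\int(\cdot)\,ds$ into evaluation at $\xi_s=0$. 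Restriction at $\xi_s=0$ clearly commutes with $T^{(0)}$ and acts as a bijection on the span of the constant, giving a finite $K$; your integration operator would not.

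Second, your boundedness argument via the Schur-type estimate of Lemma \ref{lm:boundedness_of_molifier} cannot be applied as stated: that lemma concerns integral operators on $L^2(\real^{2D})$ with kernels on $\real^{2D}\times\real^{2D}$, i.e.\ operators between spaces of the \emph{same} phase-space dimension. Here $p_*$ drops from dimension $2d+d'$ to $2d$, so the kernel of the lifted operator $P_*$ lives on $\real^{4d}\times\real^{4d+2d'}$ and the lemma does not directly apply. The paper instead writes the lifted kernel explicitly as an integral against the Gaussian $e^{-\hbar^{-1}|\xi_s|^2/2}$ and runs a direct Cauchy--Schwarz argument, absorbing the Gaussian into the weight via the pointwise bound $\mathcal{W}^r_\hbar(x,\xi_x)^2\cdot e^{-\hbar^{-1}|\xi_s|^2}\le C\cdot\widetilde{\mathcal{W}}^r_\hbar(x,s,\xi_x,\xi_s)^2$. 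You should replace the appeal to Schur by this estimate (or prove a dimension-mismatched variant). Finally, you are right that the displayed identity in the Lemma has the two norms swapped (and $u$ should range over $E'_0$, not $H'_0$); your correction is consistent with the proof.
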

\begin{proof}
We first prove boundedness of the operator $p_{*}:\mathcal{\widetilde{H}}_{\hbar}^{r}(\real^{2d+d'})\to\mathcal{H}_{\hbar}^{r}(\real^{2d})$.
Let us consider the operator $P_{*}:\mathcal{S}(\real^{4d+2d'})\to\mathcal{S}(\real^{4d})$
defined by 
\[
P_{*}v(x,\xi_{x})=a_{d'}^{-1}\cdot(2\pi\hbar)^{-d'/2}\cdot\int v(x,s,\xi_{x},\xi_{s})\exp(-\hbar^{-1}|\xi_{s}|^{2}/2)\frac{dsd\xi_{s}}{(2\pi\hbar)^{d'}}.
\]
It makes the following diagram commute: 
\[
\begin{CD}\mathcal{S}\left(\real_{(x,s)}^{2d+d'}\right)@>\Bargmann_{(x,s)}>>\mathcal{S}\left(\real_{(x,s,\xi_{x},\xi_{s})}^{4d+2d'}\right)\\
@VVp_{*}V@VVP_{*}V\\
\mathcal{S}\left(\real_{x}^{2d}\right)@>\Bargmann_{x}>>\mathcal{S}\left(\real_{(x,\xi_{x})}^{2d}\right)
\end{CD}
\]
Since we have, from the definition of $\mathcal{W}_{\hbar}^{r}(x,\xi_{x})$
in (\ref{eq:def_W_W+}) and that of $\widetilde{\mathcal{W}}_{\hbar}^{r}(x,s,\xi_{x},\xi_{s})$
in (\ref{eq:def_W_W+-1}), that

\[
\mathcal{W}_{\hbar}^{r}(x,\xi_{x})^{2}\cdot\exp(-\hbar^{-1}|\xi_{s}|^{2})\le C\cdot\widetilde{\mathcal{W}}_{\hbar}^{r}(x,s,\xi_{x},\xi_{s})^{2}
\]
for some constant $C>0$ independent of $\hbar$, we obtain, by Schwartz
inequality, 
\begin{align*}
\mathcal{W}_{\hbar}^{r}(x,\xi_{x})^{2}|P_{*}v(x,\xi_{x})|^{2} & =a_{d'}^{-2}\cdot(2\pi\hbar)^{-d'/2}\cdot\left|\int\mathcal{W}_{\hbar}^{r}(x,\xi_{x})v(x,s,\xi_{x},\xi_{s})\exp(-\hbar^{-1}|\xi_{s}|^{2})dsd\xi_{s}\right|^{2}\\
 & \le C\int\widetilde{\mathcal{W}}_{\hbar}^{r}(x,s,\xi_{x},\xi_{s})^{2}|v(x,s,\xi_{x},\xi_{s})|^{2}dsd\xi_{s}
\end{align*}
with $C>0$ a constant independent of $\hbar$. Integrating the both
sides with respect to the variables $x$ and $\xi_{x}$, we get
\[
\|\mathcal{W}_{\hbar}^{r}\cdot P_{*}v\|_{L^{2}}\le C\|\widetilde{\mathcal{W}}_{\hbar}^{r}\cdot v\|_{L^{2}}.
\]
This implies that $p_{*}:\mathcal{\widetilde{H}}_{\hbar}^{r}(\real^{2d+d'})\to\mathcal{H}_{\hbar}^{r}(\real^{2d})$
is bounded. 

We next prove the remaining claims. Recall that the operator $T^{(0)}$
defined in (\ref{eq:def_Tk}) is the rank one projection operator
that assigns a function the constant term of its Taylor expansion
at the origin $0$. For the operator $p_{*}':\mathcal{S}\left(\real_{(\zeta_{p},\xi_{s})}^{d+d'}\right)\to\mathcal{S}\left(\real_{\zeta_{p}}^{d}\right)$
defined by $p_{*}'v(\zeta_{p})=v(\zeta_{p},0)$, we have the commutative
diagram:
\[
\begin{CD}\mathcal{S}\left(\real_{(\zeta_{p},\xi_{s})}^{d+d'}\right)@>T^{(0)}>>\mathcal{S}\left(\real_{(\zeta_{p},\xi_{s})}^{d+d'}\right)\\
@VVp_{*}'V@VVp_{*}'V\\
\mathcal{S}\left(\real_{\zeta_{p}}^{d}\right)@>T^{(0)}>>\mathcal{S}\left(\real_{\zeta_{p}}^{d}\right)
\end{CD}
\]
The images of the operators $T^{(0)}$ on the upper and lower rows
are one-dimensional subspaces that consists of the constant functions.
The operator $p_{*}'$ restricts to an isomorphisms between them.
The commutative diagram above extends to
\[
\begin{CD}H_{\hbar}^{r}\left(\real_{(\zeta_{p},\xi_{s})}^{d+d'}\right)@>T^{(0)}>>H_{\hbar}^{r}\left(\real_{(\zeta_{p},\xi_{s})}^{d+d'}\right)\\
@VVp_{*}'V@VVp_{*}'V\\
H_{\hbar}^{r}\left(\real_{\zeta_{p}}^{d}\right)@>T^{(0)}>>H_{\hbar}^{r}\left(\real_{\zeta_{p}}^{d}\right)
\end{CD}
\]
and hence trivially to 
\begin{equation}
\begin{CD}L^{2}\left(\real_{\nu_{q}}^{d}\right)\otimes H_{\hbar}^{r}\left(\real_{(\zeta_{p},\xi_{s})}^{d+d'}\right)@>\mathrm{Id}\otimes T^{(0)}>>L^{2}\left(\real_{\nu_{q}}^{d}\right)\otimes H_{\hbar}^{r}\left(\real_{(\zeta_{p},\xi_{s})}^{d+d'}\right)\\
@VV\mathrm{Id}\otimes p_{*}'V@VV\mathrm{Id}\otimes p_{*}'V\\
L^{2}\left(\real_{\nu_{q}}^{d}\right)\otimes H_{\hbar}^{r}\left(\real_{\zeta_{p}}^{d}\right)@>\mathrm{Id}\otimes T^{(0)}>>L^{2}\left(\real_{\nu_{q}}^{d}\right)\otimes H_{\hbar}^{r}\left(\real_{\zeta_{p}}^{d}\right)
\end{CD}\label{cd:Tzero}
\end{equation}
This commutative diagram viewed through the isomorphisms
\[
\mathcal{U}:L^{2}\left(\real_{\nu_{q}}^{d}\right)\otimes H_{\hbar}^{r}\left(\real_{\zeta_{p}}^{d}\right)@\mathcal{H}_{\hbar}^{r}\left(\real_{x}^{2d}\right)\quad\mbox{and}\quad\widetilde{\mathcal{U}}:L^{2}\left(\real_{\nu_{q}}^{d}\right)\otimes H_{\hbar}^{r}\left(\real_{(\zeta_{p},\xi_{s})}^{d+d'}\right)\to\mathcal{H}\left(\real_{(x,s)}^{2d+d'}\right)
\]
is just 

\[
\begin{CD}\mathcal{H}_{\hbar}^{r}\left(\real_{(x,s)}^{2d+d'}\right)@>t_{\hbar}^{(0)}>>\mathcal{H}_{\hbar}^{r}\left(\real_{(x,s)}^{2d+d'}\right)\\
@VVp{}_{*}V@VVp{}_{*}V\\
\mathcal{H}_{\hbar}^{r}\left(\real_{x}^{2d}\right)@>t_{\hbar}^{(0)}>>\mathcal{H}_{\hbar}^{r}\left(\real_{x}^{2d}\right)
\end{CD}
\]
Therefore for the proofs of the remaining claims of the proposition,
it is enough to prove the corresponding claims in the diagram (\ref{cd:Tzero}).
But they are now obvious from the fact that $T^{(0)}$ is a rank one
projection operator. 
\end{proof}
~
\begin{proof}[Proof of Proposition \ref{prop:properties_of_pi_*} ]
 The proofs of the claims other than Claim 2 are obtained by applying
Lemma \ref{lem:proj_and_spectral_projector_to_first_band} to the
local data and showing that the effect of non-linearity of the coordinate
transformations is negligible. We omit the detail of the argument
because it should be clear if we recall the argument in Section \ref{sec:Proofs-of-the_mains}
and the preceding subsections in this section. 

We prove Claim 2. Recall that $E_{u}:M\to G$ is the section that
assigns the unstable subspace to each point. Let $\mu_{u}=(E_{u})_{*}(Vol_{\omega})$
where $Vol_{\omega}$ is the symplectic volume on $M$. We consider
the operator
\[
\iota:C_{N}^{\infty}(P)\to C_{N}^{\infty}(P_{G})'\quad\iota(u)=(u\circ p)\cdot\mu_{u}
\]
which satisfies $p_{*}\circ\iota=\mathrm{Id}$. To show that $\mathcal{\widetilde{H}}_{\hbar}^{r}(P)$
contains the space $C_{N}^{\infty}(P)$, it is enough to prove that
the image $\iota(C_{N}^{\infty}(P))$ of this operator is contained
in $\mathcal{\widetilde{H}}_{\hbar}^{r}(P_{G},E_{u})$. We define
a Hilbert space of distributions $\mathcal{\widehat{H}}_{\hbar}^{r}(G,E_{u})\subset C^{\infty}(G)'$
on $G$ as the completion of the space $C_{\hbar}^{\infty}(G,E_{u})$
with respect to the norm 
\[
\|u\|_{\mathcal{\widehat{H}}_{\hbar}^{r}}:=\left(\sum_{i=1}^{I_{\hbar}}\|W_{\hbar}^{r}\cdot\Bargmann_{x}u_{i}\|_{L^{2}}^{2}\right)^{1/2}
\]
where $u_{i}(x):=\psi_{i}(x)\cdot u(\kappa_{i}(x))$ and $W_{\hbar}^{r}:\real^{2d+d'}\oplus\real^{2d+d'}\to\real$
is defined by 
\[
W_{\hbar}^{r}(p,q,s,\xi_{p},\xi_{q},\xi_{s})=W^{r}(\hbar^{-1/2}\xi_{p},(\hbar^{-1/2}\xi_{q},\hbar^{-1/2}\xi_{s})).
\]
Then, from the results in \cite{Baladi-Tsujii08}and \cite{fred-roy-sjostrand-07},
the Perron-Frobenius operator 
\[
Q:C_{\hbar}^{\infty}(G,E_{u})\to C_{\hbar}^{\infty}(G,E_{u}),\quad Qu(x)=\frac{1}{|\det(Df_{G}|_{\ker p})|}\cdot u\circ f_{G}^{-1}
\]
extends to a bounded operator on $\mathcal{\widehat{H}}_{\hbar}^{r}(G,E_{u})$.
The spectral radius of $Q:\mathcal{\widehat{H}}_{\hbar}^{r}(G,E_{u})\to\mathcal{\widehat{H}}_{\hbar}^{r}(G,E_{u})$
is $1$ and the essential spectral radius is smaller than $1$. Further
$1$ is the unique eigenvalue on the unit circle, which is simple,
and the corresponding eigenvector is $\mu_{u}$. In particular, the
measure $\mu_{u}$ belongs to $\mathcal{\widehat{H}}_{\hbar}^{r}(G,E_{u})$.
To finish, we consider the operator 
\[
j:C_{\hbar}^{\infty}(G,E_{u})\times C_{N}^{\infty}(P_{G})\to C_{N}^{\infty}(P_{G}),\quad j(u,v)=u\cdot v
\]
and check that it extends to a continuous operator%
\footnote{Note that this claim is for each fixed $\hbar$ and we do not claim
any uniformly in $\hbar$. So the proof is easy if we consider in
the local charts.%
} 
\[
j:\mathcal{\widehat{H}}_{\hbar}^{r}(G,E_{u})\times C_{N}^{\infty}(P_{G})\to\mathcal{\widetilde{H}}_{\hbar}^{r}(P_{G},E_{u}).
\]
Since $\iota(u)=j(\mu_{u},u)$, this implies that $\iota(C_{N}^{\infty}(P))$
is contained in $\mathcal{\widetilde{H}}_{\hbar}^{r}(P_{G},E_{u})$.\end{proof}

\section{\label{sec:9}Proof of Th. \ref{Thm:Distribution-of-resonances.}.
Concentration of most of the external eigenvalues on a circle. }

\subsection{Time average and Birkhoff's ergodic theorem}

Let us write $\hat{F}_{V}$ and $\hat{F}_{V,N}$ for the prequantum
transfer operators $\hat{F}$ and $\hat{F}_{N}$ defined respectively
in (\ref{eq:def_prequantum_operator_F}) and (\ref{eq:def_F_N_on_P}),
specifying dependence on the potential function $V$. In this subsection,
we prove the first part of Theorem \ref{Thm:Distribution-of-resonances.},
namely that almost all eigenvalues of $\hat{F}_{V,N}$ of the external
band concentrate at the value $e^{\left\langle V-V_{0}\right\rangle }$
as $N\rightarrow\infty$, where $V_{0}$ is the potential of reference
defined in (\ref{eq:choice_V_smooth}) and whose importance has been
shown in Theorem \ref{cor:The-special-choice_V0}. Take $\epsilon>0$
arbitrarily and let $V_{\epsilon}$ be a smooth approximation of the
function $V_{0}$ such that 
\[
|V_{0}(x)-V_{\epsilon}(x)|<\epsilon\quad\mbox{for all }x\in M.
\]
We introduce the ``(approximate) damping function'' as the difference
\[
D:=V-V_{\epsilon}
\]

The next lemma shows that the transfer operator $\hat{F}_{V}$ is
conjugate (and hence has the same spectrum) to an operator $\mathcal{L}_{D_{n}}$
which is the operator of reference $\hat{F}_{V_{\epsilon}}$ with
an additional potential $D_{n}$ obtained by the time average of the
damping function $D$. This presentation of the operator $\hat{F}_{V}$
will be very convenient for our purpose.
\begin{lem}
For any $n\geq1$, 
\begin{equation}
\hat{F}_{V}=e^{G_{n}}\circ\mathcal{L}_{D_{n}}\circ e^{-G_{n}}\label{eq:conjugation_with_L_Dn}
\end{equation}
where
\[
\mathcal{L}_{D_{n}}:=e^{D_{n}}F_{V_{\epsilon}}
\]
and
\[
D_{n}:=\frac{1}{n}\sum_{k=1}^{n}D\circ\tilde{f}^{-k}
\]
is the time averaged of the ``damping function'' $D$ and $G_{n}$
is the multiplication operator by the function (with same name):
\[
G_{n}=\frac{1}{n}\sum_{k=1}^{n}kD\circ\tilde{f}^{-n+k}
\]
\end{lem}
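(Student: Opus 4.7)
The plan is to verify the conjugation identity directly, by applying both sides to an arbitrary test function $u\in C^\infty(P)$ and reducing everything to an algebraic identity for scalar-valued functions on $P$. Using the defining formula $\hat F_{V}u=e^{V\circ\pi}(u\circ\tilde f^{-1})$ and the trivial identity $(e^{G_n}u)\circ\tilde f^{-1}=e^{G_n\circ\tilde f^{-1}}(u\circ\tilde f^{-1})$, the operator $e^{-G_n}\hat F_V e^{G_n}$ applied to $u$ equals
\[
e^{-G_n+V+G_n\circ\tilde f^{-1}}\cdot(u\circ\tilde f^{-1}),
\]
while $\mathcal{L}_{D_n}u=e^{V_\epsilon+D_n}(u\circ\tilde f^{-1})$. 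Since $D=V-V_\epsilon$, the lemma is therefore equivalent to the pointwise scalar identity
\[
D+G_n\circ\tilde f^{-1}-G_n=D_n.
\]

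To check this identity, I would use a reindexing (telescoping) argument. Set $\phi_m:=D\circ\tilde f^{-m}$, so that $\phi_m\circ\tilde f^{-1}=\phi_{m+1}$. Substituting $m=n-k$ in the definition of $G_n$ yields the convenient form
\[
G_n=\frac{1}{n}\sum_{m=0}^{n-1}(n-m)\,\phi_m,\qquad G_n\circ\tilde f^{-1}=\frac{1}{n}\sum_{l=1}^{n}(n-l+1)\,\phi_l.
\]
Matching the terms with equal index $l=m\in\{1,\dots,n-1\}$, the coefficients differ by exactly $1$; the leftover endpoint contributions are $+\phi_n$ (from the first sum at $l=n$) and $-n\phi_0$ (from the second sum at $m=0$). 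Collecting,
\[
G_n\circ\tilde f^{-1}-G_n=\frac{1}{n}\sum_{l=1}^{n}\phi_l-\phi_0=D_n-D,
\]
which is the required identity.

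The argument is purely algebraic, so there is no analytic obstacle: the identity holds pointwise on $P$, and the corresponding multiplication operators all commute (since $D$, $D_n$ and $G_n$ are pulled back from $M$), so the manipulations are valid at the operator level on $C^\infty(P)$ without any issue of convergence or domain. The only conceptually important point is that the linearly increasing weights $1,2,\dots,n$ in the definition of $G_n$ are chosen precisely so that $G_n\circ\tilde f^{-1}-G_n$ becomes a time average; this is the standard coboundary trick used, e.g., in Sjöstrand's treatment of the damped wave equation, which underlies the proof of Theorem \ref{Thm:Distribution-of-resonances.} in the next subsection.
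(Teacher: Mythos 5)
Your proof is correct and follows essentially the same route as the paper: both reduce the operator identity to the scalar coboundary relation $D + G_n\circ\tilde f^{-1} - G_n = D_n$ and verify it by reindexing the defining sums. The only cosmetic difference is that you compute $e^{-G_n}\hat F_V e^{G_n}$ directly whereas the paper computes $e^{G_n}\mathcal L_{D_n} e^{-G_n}$, which is the same calculation read in the other direction.
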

\begin{proof}
From (\ref{eq:def_prequantum_operator_F}), we have that
\[
e^{G_{n}}\circ\left(e^{D_{n}}\hat{F}_{V_{\epsilon}}\right)\circ e^{-G_{n}}=e^{G_{n}+D_{n}-G_{n}\circ\tilde{f}^{-1}}\hat{F}_{V_{\epsilon}}.
\]
We compute that
\begin{align*}
G_{n}+D_{n}-G_{n}\circ\tilde{f}^{-1} & =\frac{1}{n}\sum_{k=1}^{n}\left(kD\circ\tilde{f}^{-n+k}+D\circ\tilde{f}^{-n+k-1}\right)-\frac{1}{n}\sum_{k=0}^{n-1}\left(k+1\right)D\circ\tilde{f}^{-n+k}\\
 & =D=V-V_{\epsilon}
\end{align*}
Hence
\[
e^{G_{n}}\circ\left(e^{D_{n}}\hat{F}_{V_{\epsilon}}\right)\circ e^{-G_{n}}=e^{V-V_{\epsilon}}\hat{F}_{V_{\epsilon}}=\hat{F}_{V}.
\]

\end{proof}
Due to ergodicity of the map $f:M\rightarrow M$, the time averaged
function $D_{n}$ converges for $n\rightarrow\infty$ almost everywhere
to its spatial average: 
\[
\left\langle D\right\rangle :=\frac{1}{\mbox{Vol}_{\omega}\left(M\right)}\int_{M}Ddx
\]
In particular we will use later on that
\begin{equation}
\int_{M}\left(e^{D_{n}}-e^{\left\langle D\right\rangle }\right)dx\underset{n\rightarrow\infty}{\rightarrow}0.\label{eq:ergodicity}
\end{equation}

Let
\[
\mathcal{L}_{\left\langle D\right\rangle }:=e^{\left\langle D\right\rangle }F_{V_{\epsilon}}
\]
Let $\mathcal{L}_{\left\langle D\right\rangle ,N}$ be the restriction
of $\mathcal{L}_{\left\langle D\right\rangle }$ to $\mathcal{H}_{N}^{r}\left(P\right)$.
Its spectrum is that of $\hat{F}_{V_{\epsilon},N}$ multiplied by
the constant factor $e^{\left\langle D\right\rangle }$, in particular
its external part concentrates (as $N\rightarrow\infty$) to the annulus
$e^{\left\langle D\right\rangle -\epsilon}\le|z|\le e^{\left\langle D\right\rangle +\epsilon}$
and the disk $|z|\le(1/\lambda)e^{\left\langle D\right\rangle +\epsilon}$
from Theorem \ref{thm:band_structure}. Let $\tau_{N}^{\left(0\right)}$
be the (finite rank) approximate projection operator on the external
band of $\hat{F}_{V,N}$ introduced in Theorem \ref{thm:More-detailled-description_of_spectrum}.
(We put the subscript $N$ now to make the dependence on $N$ explicit.)
Recall that this approximate spectral projector $\tau_{N}^{\left(0\right)}$
does not depend on the choice of the potential $V$, hence it is also
suitable for the transfer operators $\mathcal{L}_{D_{n}}$ and $\mathcal{L}_{\left\langle D\right\rangle }$.
We want now to study the ``quantity of spectrum'' of the external
band of $\mathcal{L}_{D_{n},N}$ which does not concentrates on the
annulus $e^{\left\langle D\right\rangle -\epsilon}\le|z|\le e^{\left\langle D\right\rangle +\epsilon}$
. To this end, we introduce the operator:
\[
S_{n,N}:=\tau_{N}^{\left(0\right)}\left(\mathcal{L}_{D_{n},N}-\mathcal{L}_{\left\langle D\right\rangle ,N}\right)\tau_{N}^{\left(0\right)}=\tau_{N}^{\left(0\right)}\left(\left(e^{D_{n}}-e^{\left\langle D\right\rangle }\right)\cdot\hat{F}_{V_{\epsilon},N}\right)\tau_{N}^{\left(0\right)}
\]
and $T_{n,N}:=\mathcal{L}_{D_{n},N}-S_{n,N}$ giving the following
decomposition:
\[
\mathcal{L}_{D_{n},N}=T_{n,N}+S_{n,N}
\]
Notice that $T_{n,N}$ is in some sense $\mathcal{L}_{D_{n},N}$ but
with the the external band replaced by the spectrum of $\mathcal{L}_{\left\langle D\right\rangle ,N}$.
As a consequence, from Theorem \ref{thm:band_structure}, for $N$
large enough, the operator $T_{n,N}$ has no spectrum in the spectral
domain%
\footnote{Here we used the fact that $V_{\epsilon}$ is an $\epsilon$-approximation
of $V_{0}$.%
}
\[
W_{\epsilon}:=\left\{ z\in\mathbb{C},\quad\frac{1}{\lambda}e^{\sup D_{n}+\epsilon}\leq\left|z\right|\leq e^{\left\langle D\right\rangle -2\epsilon}\mbox{ or }e^{\left\langle D\right\rangle +2\epsilon}\leq\left|z\right|\right\} 
\]
and, moreover, on $W_{\epsilon}$ we have a bound on the norm of the
resolvent of $T_{n,N}$: there exists $C_{\epsilon}>0$, and $N_{\epsilon}$
such that for any $N\geq N_{\epsilon}$,
\begin{equation}
\left\Vert \left(z-T_{n,N}\right)^{-1}\right\Vert \leq C_{\epsilon}\quad\mbox{uniformly for z\ensuremath{\in W_{\epsilon}}}\label{eq:bound_on_resolvent_of_T}
\end{equation}
and
\begin{equation}
\left\Vert T_{n,N}\right\Vert \leq e^{\left\langle D\right\rangle +2\epsilon}.\label{eq:norm_T}
\end{equation}

Recall that the number of eigenvalues in the external band is $C_{0}\cdot N^{d}+\mathcal{O}(N^{-\epsilon})$
with $C_{0}=Vol_{\omega}(M)$ from Theorem \ref{thm:More-detailled-description_of_spectrum}.
The next lemma concerns the trace norm of $S_{n,N}$ and is the key
to show that the ``perturbation'' $S_{n,N}$ may add only a relatively
negligible number of eigenvalues on $W_{\epsilon}$.
\begin{lem}
\label{lem:trace_norm_estiamate}For any $\epsilon'>0$, there exists
$n\geq1$, $N_{\epsilon'}>0$, such that, for any $N>N_{\epsilon'}$,
\begin{equation}
\left\Vert S_{n,N}\right\Vert _{\mathrm{Tr}}\leq\epsilon'N^{d}.\label{eq:bound_trace_S}
\end{equation}
\end{lem}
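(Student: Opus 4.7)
The strategy combines two ingredients: the $L^1$-convergence (\ref{eq:ergodicity}) coming from Birkhoff's ergodic theorem applied to the volume-preserving Anosov map $f$, and the semiclassical trace bound for multiplication operators composed with $\tau_N^{(0)}$, which follows from the quantization calculus of Theorem~\ref{prop:express_Op_Psi}. Writing $\psi_{n}:=e^{D_{n}}-e^{\langle D\rangle}$, the key observation is that for each fixed $n$, the function $\psi_{n}$ is smooth on $M$ (as a finite sum of compositions of smooth maps), and therefore belongs to the symbol class $S_{0}\subset S_{\delta}$ with seminorms depending on $n$ but not on $\hbar$. The plan is to first bound $\|S_{n,N}\|_{\mathrm{Tr}}$ by $C\,N^{d}\int_{M}|\psi_{n}|\,dx$ plus a remainder that is $o(N^{d})$ for fixed $n$, and then choose $n$ first to make the integral small by ergodicity, followed by $N$ large enough to dispose of the remainder.

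\textbf{Step 1 (reduction to a pure multiplication trace).} Using submultiplicativity of the trace norm,
\[
\|S_{n,N}\|_{\mathrm{Tr}}\;\le\;\bigl\|\tau_{N}^{(0)}\mathcal{M}(\psi_{n})\bigr\|_{\mathrm{Tr}}\cdot\bigl\|\hat{F}_{V_{\epsilon},N}\,\tau_{N}^{(0)}\bigr\|.
\]
By Theorem~\ref{thm:More-detailled-description_of_spectrum}(4), the operator norm factor is bounded by a constant $K$ independent of $n$ and $N$.

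\textbf{Step 2 (trace bound for smooth multipliers).} For fixed $n$, apply the estimate (\ref{eq:trace_estimate_on_tauMphi}) obtained in the proof of Theorem~\ref{prop:express_Op_Psi} to $\psi_{n}\in S_{\delta}$: there exist uniformly bounded rank one projectors $\hat{\pi}(x)$ and constants $C_{n},\varepsilon>0$ such that
\[
\Bigl\|\tau_{N}^{(0)}\mathcal{M}(\psi_{n})-\frac{1}{(2\pi\hbar)^{d}}\int_{M}\psi_{n}(x)\,\hat{\pi}(x)\,dx\Bigr\|\;\le\;C_{n}\,\hbar^{\varepsilon}.
\]
Both sides have range contained in (a space close to) $\mathcal{H}_{\hbar}$, whose dimension is $O(N^{d})$ by the Weyl law. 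Hence the trace norm of the difference is at most $C'_{n}N^{d}\hbar^{\varepsilon}=o_{n}(N^{d})$. Since $\|\hat{\pi}(x)\|_{\mathrm{Tr}}\le C$ uniformly, the integral term satisfies
\[
\Bigl\|\frac{1}{(2\pi\hbar)^{d}}\int_{M}\psi_{n}(x)\,\hat{\pi}(x)\,dx\Bigr\|_{\mathrm{Tr}}\;\le\;C\,N^{d}\int_{M}|\psi_{n}|\,dx.
\]
Combining the two gives $\|\tau_{N}^{(0)}\mathcal{M}(\psi_{n})\|_{\mathrm{Tr}}\le CN^{d}\int_{M}|\psi_{n}|\,dx+o_{n}(N^{d})$.

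\textbf{Step 3 (choice of $n$, then $N$).} Given $\epsilon'>0$, use (\ref{eq:ergodicity}) to choose $n=n(\epsilon')$ with $\int_{M}|e^{D_{n}}-e^{\langle D\rangle}|\,dx<\epsilon'/(3CK)$. With this $n$ frozen, pick $N_{\epsilon'}$ so that the remainder $o_{n}(N^{d})$ from Step~2 is bounded by $\epsilon' N^{d}/(3K)$ for $N\ge N_{\epsilon'}$. Multiplying by $K$ as in Step~1 yields $\|S_{n,N}\|_{\mathrm{Tr}}\le\epsilon' N^{d}$, as desired.

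\textbf{Main obstacle.} The delicate point is precisely the order of quantifiers. The $C^{k}$ seminorms of $D_{n}$ (and hence of $\psi_{n}$) grow roughly like $\lambda^{n}$ as $n\to\infty$ because each additional iterate $\tilde{f}^{-k}$ in the Birkhoff sum magnifies derivatives by a factor of hyperbolicity. Consequently the implicit constant $C_{n}$ in the semiclassical remainder of Step~2 blows up with $n$, and the argument \emph{cannot} be carried out by letting $n$ and $N$ tend to infinity simultaneously. What saves us is that $\psi_{n}$ belongs to $S_{0}$ with constants depending only on $n$, so for each \emph{fixed} $n$ the Theorem~\ref{prop:express_Op_Psi} estimate has a genuine $\hbar^{\varepsilon}$ gain and the remainder is $o(N^{d})$; the quantifiers must simply be taken in the correct order, which is dictated by ergodicity.
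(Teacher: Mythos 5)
The overall architecture matches the paper's own argument: the same decomposition $\|S_{n,N}\|_{\mathrm{Tr}}\le\|\tau_N^{(0)}\mathcal{M}(\psi_n)\|_{\mathrm{Tr}}\cdot\|\hat F_{V_\epsilon,N}\tau_N^{(0)}\|$, the same use of (\ref{eq:trace_estimate_on_tauMphi}) to bound the first factor, and the same order of quantifiers (fix $n$ by ergodicity, then let $N\to\infty$). Your closing paragraph on the $\lambda^n$ growth of the $C^k$ seminorms of $D_n$ and why the quantifiers must be taken in that order is accurate and is the key point.

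However, Step 2 has a genuine gap, and the parenthetical \emph{``(a space close to) $\mathcal{H}_\hbar$''} is doing illegal work. You want to upgrade the operator-norm bound (\ref{eq:trace_estimate_on_tauMphi}) to a trace-norm bound by controlling the rank of the difference. But (\ref{eq:trace_estimate_on_tauMphi}) involves the unrestricted rank-one projectors $\hat{\pi}(x)$, whose one-dimensional ranges sweep out a subspace of $\mathcal{H}_N^r(P)$ of unbounded dimension as $x$ ranges over $M$; they are \emph{not} contained in $\mathcal{H}_\hbar$ or $\mathcal{H}_0$ or any fixed $O(N^d)$-dimensional space. So $\mathrm{rank}\bigl(\tau_N^{(0)}\mathcal{M}(\psi_n)-\tfrac{1}{(2\pi\hbar)^d}\int\psi_n\hat{\pi}\bigr)$ is not $O(N^d)$ and the estimate $\|\cdot\|_{\mathrm{Tr}}\le(\mathrm{rank})\cdot\|\cdot\|$ does not give what you claim. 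Note also that the obvious alternative — bounding $\|\tau_N^{(0)}\mathcal{M}(\psi_n)\|_{\mathrm{Tr}}\le(\mathrm{rank}\,\tau_N^{(0)})\cdot\|\mathcal{M}(\psi_n)\|$ — only yields $CN^d\|\psi_n\|_\infty$; Birkhoff gives $L^1$ convergence, not uniform convergence, so $\|\psi_n\|_\infty$ need not tend to $0$ and this is not good enough. What one actually needs is a genuine \emph{trace-norm} estimate. One fix: use (\ref{eq:Op_and_intgral_wavepckets}) instead, which replaces $\hat{\pi}(x)$ by $\pi_x=\Pi_\hbar\hat{\pi}(x)\Pi_\hbar$, whose ranges \emph{are} inside $\mathcal{H}_\hbar$; combined with $\|\tau_N^{(0)}-\Pi_\hbar\|\le C\hbar^\epsilon$ (from (\ref{eq:diff_projectors})) and (\ref{eq:commut}), all the error operators have ranges in $\mathcal{H}_0+\mathcal{H}_\hbar$, which is $O(N^d)$-dimensional, and the rank argument now applies. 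Alternatively, go directly to the trace-norm estimate (\ref{eq:trace_formula_local}) of Lemma \ref{lm:trace_basic}, which gives a per-chart trace-norm remainder $O(\hbar^{-2\theta d+\theta})$; summing over the $I_\hbar\sim\hbar^{-(1-2\theta)d}$ charts gives $O(\hbar^{-d+\theta})=o(N^d)$, with no rank argument needed. Either route closes the gap; as written, your Step 2 does not.
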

\begin{proof}
We have $S_{n,N}=\tau_{N}^{\left(0\right)}\left(e^{D_{n}}-e^{\left\langle D\right\rangle }\right)\hat{F}_{V_{\epsilon},N}\tau_{N}^{\left(0\right)}$
so
\[
\left\Vert S_{n,N}\right\Vert _{\mathrm{Tr}}\leq\left\Vert \tau_{N}^{\left(0\right)}\left(e^{D_{n}}-e^{\left\langle D\right\rangle }\right)\right\Vert _{\mathrm{Tr}}\left\Vert \hat{F}_{V_{\epsilon},N}\tau_{N}^{\left(0\right)}\right\Vert \leq C\left\Vert \tau_{N}^{\left(0\right)}\left(e^{D_{n}}-e^{\left\langle D\right\rangle }\right)\right\Vert _{\mathrm{Tr}}
\]
Let $\epsilon'>0$. From (\ref{eq:trace_estimate_on_tauMphi}), 
we have 
\[
\left\Vert \tau_{N}^{\left(0\right)}\left(e^{D_{n}}-e^{\left\langle D\right\rangle }\right)\right\Vert _{\mathrm{Tr}}\leq CN^{d}\left(\int_{M}\left|e^{D_{n}}-e^{\left\langle D\right\rangle }\right|dx\right)
\]
From (\ref{eq:ergodicity}), there exists $n$ such that $\left|\int_{M}\left(e^{D_{n}}-e^{\left\langle D\right\rangle }\right)\right|<\epsilon'/C^{2}$,
giving $\left\Vert S_{n,N}\right\Vert _{\mathrm{Tr}}\leq\epsilon'N^{d}$
for $N\geq N_{\epsilon'}:=N_{n}$.
\end{proof}

\subsection{Proof of concentration of the moduli of the resonance to the circle
$|z|=e^{\langle V-V_{0}\rangle}$}

Now, as a consequence of (\ref{eq:bound_on_resolvent_of_T}) and (\ref{eq:bound_trace_S})
concerning the decomposition $\mathcal{L}_{D_{n},N}=T_{n,N}+S_{n,N}$,
we prove 
\begin{lem}
\label{lem:The-number-of_zeros}For any $\epsilon>0$, there exists
$n$ and $N_{\epsilon}$ such that, for any $N>N_{\epsilon}$, the
number of eigenvalues of $\mathcal{L}_{D_{n},N}$ in $W_{\epsilon}$
counting multiplicities, is bounded by $\epsilon N^{d}$.
\end{lem}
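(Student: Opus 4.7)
My plan is to identify the eigenvalues of $\mathcal{L}_{D_{n},N}$ lying in $W_{\varepsilon}$ with the zeros of a Fredholm determinant built from the perturbation $S_{n,N}$, and then to bound the number of these zeros by a Jensen-type inequality driven by the smallness of $\|S_{n,N}\|_{\mathrm{Tr}}$ from Lemma~\ref{lem:trace_norm_estiamate}. The first step is the factorization
\[
z-\mathcal{L}_{D_{n},N}=(z-T_{n,N})\bigl(I-(z-T_{n,N})^{-1}S_{n,N}\bigr),
\]
valid on $W_{\varepsilon}$ since $T_{n,N}$ has no spectrum there and its resolvent is bounded by $C_{\varepsilon}$ by \eqref{eq:bound_on_resolvent_of_T}. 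Because $S_{n,N}$ factors through the finite-rank projector $\tau_{N}^{(0)}$, it is trace class and therefore the Fredholm determinant
\[
F(z):=\det\bigl(I-(z-T_{n,N})^{-1}S_{n,N}\bigr)
\]
is well defined and holomorphic on $W_{\varepsilon}$, with zeros (counted with multiplicities) coinciding with the eigenvalues of $\mathcal{L}_{D_{n},N}$ in $W_{\varepsilon}$.

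The second step is the crucial upper bound. Applying Weyl's inequality $|\det(I-A)|\le\exp\|A\|_{\mathrm{Tr}}$ and combining \eqref{eq:bound_on_resolvent_of_T} with the trace-norm estimate \eqref{eq:bound_trace_S} yields
\[
|F(z)|\le\exp\!\bigl(C_{\varepsilon}\|S_{n,N}\|_{\mathrm{Tr}}\bigr)\le\exp\bigl(C_{\varepsilon}\,\varepsilon'\,N^{d}\bigr)\qquad\text{for all } z\in W_{\varepsilon}.
\]
Simultaneously, since $(z-T_{n,N})^{-1}S_{n,N}\to0$ in norm as $|z|\to\infty$, one has $F(z)\to 1$, so $|F(z_{0})|\ge1/2$ at any $z_{0}$ with $|z_{0}|=R_{0}$ sufficiently large; the threshold $R_{0}$ can be chosen independent of $n$ and $N$ thanks to \eqref{eq:norm_T}.

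The third step is to apply a Jensen-type bound on each of the two annular components of $W_{\varepsilon}$. Each is enlarged slightly to an open annulus still lying in the resolvent set of $T_{n,N}$, on which $F$ remains holomorphic. For the outer piece, extend up to $|z|=R_{0}$ and use the lower bound $|F|\ge1/2$ there. For the inner piece, extend to a slightly larger annulus strictly between the two bands of $T_{n,N}$ and produce a normalization point by a minimum-modulus argument of Cartan type, using the uniform upper bound and the fact that $F$ is not identically zero. The Poisson--Jensen formula on annuli (or, equivalently, a conformal map to the disk followed by classical Jensen) then gives
\[
\#\{\text{zeros of }F\text{ in the smaller annulus}\}\;\le\; C\log\max|F|+C'\;\le\; C\,C_{\varepsilon}\,\varepsilon'\,N^{d},
\]
with constants depending only on $\varepsilon$ and on the geometry of the enlarged annuli. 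Summing over the two components and choosing $\varepsilon':=\varepsilon/(2CC_{\varepsilon})$, Lemma~\ref{lem:trace_norm_estiamate} supplies the corresponding $n$ and $N_{\varepsilon}$ and the conclusion follows.

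The main obstacle is producing a usable lower bound for $|F|$ inside the inner annulus: unlike the outer annulus, where $F\to 1$ at infinity gives a free normalization, the inner piece has no obvious base point. My intended workaround is the Cartan minimum-modulus lemma applied on the enlarged inner annulus: it furnishes a circle $|z|=r_{*}$ inside it on which $|F|\ge\exp(-C\varepsilon' N^{d})$, which, combined with the upper bound of the same order, is precisely what is needed to feed the Jensen-type count and still obtain the linear $\varepsilon' N^{d}$ estimate. Everything else (Fredholm theory, Weyl's inequality, resolvent bounds) is already in place from earlier in the paper.
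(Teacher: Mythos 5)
Your factorization $z-\mathcal{L}_{D_{n},N}=(z-T_{n,N})(I-\mathcal{K}(z))$, the Fredholm determinant $F(z)=\det(I-\mathcal{K}(z))$, and the upper bound $\log|F(z)|\le\|\mathcal{K}(z)\|_{\mathrm{Tr}}\le C_{\varepsilon}\varepsilon'N^{d}$ are exactly what the paper does (in its second, Jensen-based proof). The treatment of the outer, unbounded component of $W_{\varepsilon}$ via the anchor $F(z)\to 1$ as $|z|\to\infty$ is also sound. The gap is in the inner annulus: the Cartan minimum-modulus step does not work as invoked. Cartan's lemma converts an upper bound on $\log|F|$ plus a lower bound at one \emph{anchor point} into a lower bound off an exceptional set; it cannot manufacture the anchor from ``$F$ not identically zero'' together with the upper bound. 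A constant function $F\equiv e^{-N^{2d}}$ is holomorphic, zero-free, satisfies $|F|\le e^{C\varepsilon'N^{d}}$, yet is nowhere $\ge e^{-C\varepsilon'N^{d}}$; so no circle of the kind you need exists in general. The outer component escapes this objection only because $F\to1$ at infinity provides the anchor for free, and the inner annulus has no analogous degeneration.

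What actually supplies the anchor in the paper's argument is not an abstract potential-theoretic lemma but an additional \emph{spectral} input: by the band-structure theorem, not only does $T_{n,N}$ have bounded resolvent on $W_{\varepsilon}$, the full operator $\mathcal{L}_{D_{n},N}$ also has resolvent bounded by $C_{\varepsilon}$ on the slightly smaller region $W'_{\varepsilon}\subset W_{\varepsilon}$ (which meets the inner annulus). From the identity
\[
(I-\mathcal{K}(z))^{-1}=I+(z-\mathcal{L}_{D_{n},N})^{-1}S_{n,N},
\]
valid for $z\in W'_{\varepsilon}$, one gets $-\log|F(z)|\le\|(z-\mathcal{L}_{D_{n},N})^{-1}S_{n,N}\|_{\mathrm{Tr}}\le C_{\varepsilon}\varepsilon'N^{d}$, which is exactly the lower bound you wanted, now on a genuine sub-region of the inner annulus containing suitable base points for the Jensen count. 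With this replacement for the Cartan step, the rest of your argument (cover $W_{\varepsilon}$ by finitely many disks meeting $W'_{\varepsilon}$, Riemann-map each to the unit disk with center in $W'_{\varepsilon}$, apply classical Jensen) goes through and reproduces the paper's estimate.
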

Before giving the proof, remark that due to the conjugation (\ref{eq:conjugation_with_L_Dn}),
the same result holds for the operator $\widetilde{F}_{\tilde{V},N}$
and this finishes the proof for the claim on ``radial concentration
of eigenvalues'' in Theorem \ref{Thm:Distribution-of-resonances.}.
The proof that Lemma \ref{lem:The-number-of_zeros} follows from Lemma
\ref{lem:trace_norm_estiamate} should be well-known. Here we propose
two different proofs, because they are both interesting in their own
right. The first proof is based on Weyl inequality while the second
is based on Jensen's formula applied to the relative determinant.

\subsubsection{Proof of Lemma \ref{lem:The-number-of_zeros} using Weyl inequality}

We will just apply Lemma \ref{lem:0.5} stated below to the family
of operators $A_{N}=e^{-\langle D\rangle}T_{n,N}$ and $B_{N}=e^{-\langle D\rangle}S_{n,N}$
with setting $F\left(N\right)=\epsilon'N^{d}e^{\langle D\rangle}$
given from (\ref{eq:bound_trace_S}).
\begin{lem}
\label{lem:A+Trace_B}Suppose that $A_{N},B_{N}$ is a family of bounded
operators on an Hilbert space, depending on $N\in\mathbb{Z}$, such
that for every $N$, we have $\left\Vert A_{N}\right\Vert \leq1$,
$\left\Vert B_{N}\right\Vert \leq C_{B}$ with some $C_{B}>0$ and,
moreover, the operators $B_{N}$ are trace class and $\left\Vert B_{N}\right\Vert _{\mathrm{Tr}}\leq F\left(N\right)$
with some function $F:\mathbb{N}\rightarrow\mathbb{R}^{+}$. Then
for any $\epsilon>0$,
\[
\sharp\left\{ \sigma\left(A_{N}+B_{N}\right)\cap\left\{ z\in\mathbb{C},\left|z\right|\geq1+\epsilon\right\} \right\} \leq\frac{1}{\epsilon}C\cdot F\left(N\right)
\]
with $C=\frac{\left(2+C_{B}\right)}{2\log\left(1+\epsilon\right)}$.\end{lem}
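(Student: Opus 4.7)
The plan is to apply the classical Weyl inequality for compact operators to the finite-rank compression $K:=PT_NP$ of $T_N:=A_N+B_N$ onto its spectral subspace for the eigenvalues in $\{|z|\ge 1+\varepsilon\}$, and to control the singular values of $K$ by the trace norm of $B_N$.

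First, since $B_N$ is trace class (hence compact) and $\|A_N\|\le 1$, the essential spectrum of $T_N$ equals that of $A_N$ and lies in $\{|z|\le 1\}$; so the spectrum of $T_N$ in $\{|z|>1\}$ consists of isolated eigenvalues of finite algebraic multiplicity. Let $\lambda_1,\dots,\lambda_n$ be those with $|\lambda_j|\ge 1+\varepsilon$, counted with algebraic multiplicity, so that $n$ is the integer to bound, and let $V$ denote the direct sum of the corresponding generalized eigenspaces ($\dim V=n$, $T_NV\subset V$). Let $P$ be the orthogonal projection of the ambient Hilbert space onto $V$. Because $V$ is $T_N$-invariant, the rank-$n$ operator $K=PT_NP$ vanishes on $V^\perp$ and agrees with $T_N$ on $V$, and a direct check shows that its nonzero generalized eigenvectors lie in $V$ and coincide with those of $T_N|_V$; so its nonzero eigenvalues (with algebraic multiplicity) are exactly $\lambda_1,\dots,\lambda_n$. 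By Ky Fan's inequality $s_j(X+Y)\le\|X\|+s_j(Y)$ together with the fact that compressions do not increase singular values ($s_j(PB_NP)\le s_j(B_N)$), one obtains $s_j(K)\le 1+s_j(B_N)$. Weyl's inequality $\prod_{j=1}^n|\lambda_j(K)|\le\prod_{j=1}^n s_j(K)$ then gives
\[
(1+\varepsilon)^n\;\le\;\prod_{j=1}^n\bigl(1+s_j(B_N)\bigr),
\]
and taking logarithms with $\log(1+s)\le s$ and $\sum_j s_j(B_N)=\|B_N\|_{\mathrm{Tr}}\le F(N)$ yields
\[
n\,\log(1+\varepsilon)\;\le\;F(N),
\]
i.e.\ $n\le F(N)/\log(1+\varepsilon)$. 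This in turn implies the claimed bound: when $\varepsilon\le 1+C_B/2$ one has $1\le(2+C_B)/(2\varepsilon)$ and the conclusion follows at once, while when $\varepsilon>1+C_B/2$ every eigenvalue of $T_N$ satisfies $|\lambda|\le\|T_N\|\le 1+C_B<1+\varepsilon$, so $n=0$ trivially.

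The main subtle point is the identification in the second step of the nonzero spectral data of the rank-$n$ compression $K=PT_NP$ with that of $T_N$ on $V$ (including algebraic multiplicities); this uses $T_N$-invariance of $V$ in an essential way, and would fail for a generic invariant-like subspace. Apart from this, the argument reduces to two classical Schatten-class facts (Ky Fan's perturbation inequality and the compression inequality for singular values) and the Weyl inequality itself. As alluded to in the lemma's introduction, a second proof proceeds instead via the relative Fredholm determinant $d_N(\lambda)=\det(I-B_N(\lambda I-A_N)^{-1})$, whose zeros in $\{|\lambda|>1\}$ are exactly the eigenvalues of $T_N$ there and which satisfies the a priori estimate $\log|d_N(\lambda)|\le\|B_N\|_{\mathrm{Tr}}/(|\lambda|-1)\le F(N)/(|\lambda|-1)$; Jensen's formula applied after the substitution $\mu=1/\lambda$ then counts these zeros and produces the bound in the form stated.
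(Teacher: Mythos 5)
Your proof is correct, and it takes a genuinely different route from the paper's. Both arguments rest on Weyl's eigenvalue–singular value inequality, but where they deploy it, and how they feed in the trace norm of $B_N$, differ. The paper works with $P_N := C_N^*C_N$ (where $C_N := A_N + B_N$): it identifies the singular values $s_j(C_N)$ with the square roots of the eigenvalues $p_j$ of $P_N$, applies Weyl's inequality to get $M_N\log(1+\epsilon)\le\tfrac12\sum\log p_j\le\tfrac12\sum(p_j-1)$, then decomposes $P_N = A_N^*A_N + (A_N^*B_N + B_N^*A_N + B_N^*B_N)$ and estimates the trace of $P_N$ restricted to the span of its first $M_N$ eigenvectors, which yields $\sum(p_j - 1)\le(2+C_B)F(N)$. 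You instead compress $C_N$ to its $T_N$-invariant spectral subspace $V$ for the large eigenvalues, use the Ky Fan inequality $s_j(X+Y)\le\|X\|+s_j(Y)$ together with the compression inequality $s_j(PB_NP)\le s_j(B_N)$ to get $s_j(K)\le 1+s_j(B_N)$, and then apply Weyl directly to the rank-$n$ operator $K$. Your route is slightly cleaner — it avoids the product decomposition of $P_N$ entirely and as a bonus gives the tighter constant $n\le F(N)/\log(1+\epsilon)$, without the $(2+C_B)/2$ factor — but both proofs are standard Schatten-class arguments and neither is significantly longer. Your identification of the nonzero spectrum of $K = PT_NP$ with that of $T_N|_V$ (including algebraic multiplicities) is handled correctly; the key point that any generalized eigenvector of $K$ for a nonzero eigenvalue must lie in $V$ is exactly where $T_N$-invariance of $V$ is needed, and your warning that this would fail for an arbitrary subspace is well placed.
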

\begin{proof}
Let $C_{N}:=A_{N}+B_{N}$ and $P_{N}:=C_{N}^{*}C_{N}$. Let us write
\[
P_{N}=\mathcal{A}_{N}+\mathcal{B}_{N}
\]
with $\mathcal{A}_{N}:=A_{N}^{*}A_{N}$, $\mathcal{B}_{N}:=A_{N}^{*}B_{N}+B_{N}^{*}A_{N}+B_{N}^{*}B_{N}$.
We have that $\left\Vert \mathcal{A}_{N}\right\Vert \leq1$, $\left\Vert \mathcal{B}_{N}\right\Vert \leq2C_{B}+C_{B}^{2}$,
$\mathcal{B}_{N}$ is in the trace class and $\left\Vert \mathcal{B}_{N}\right\Vert _{\mathrm{Tr}}\leq\left(2+C_{B}\right)F\left(N\right)$.
The operators $C_{N}$ and $P_{N}$ have discrete spectrum outside
the circle of radius $1$. For each $N$, let $\left(\lambda_{j}\right)_{j=1,2,\cdots,M_{N}}$
denote the eigenvalues of $C_{N}$ in the domain $\left\{ z\in\mathbb{C},\left|z\right|>1+\epsilon\right\} $,
ordered in such a way that $\left|\lambda_{1}\right|\geq\left|\lambda_{2}\right|\ldots\geq\left|\lambda_{M_{N}}\right|$.
(For simplicity, we do not indicate the dependence on N). Observe
that $P_{N}$ is self-adjoint and positive. Let $p_{1}\geq p_{2}\geq\ldots\ge p_{M_{N}}$
denote the eigenvalues of $P_{N}$ counting multiplicity. The $\left(p_{j}^{1/2}\right)_{j}$
are the singular values of $C_{N}$. Weyl inequalities give (see \cite[p.50]{gohberg-00}
for a proof):
\begin{equation}
\sum_{j=1}^{M_{N}}\log\left|\lambda_{j}\right|\leq\frac{1}{2}\sum_{j=1}^{M_{N}}\log p_{j}\label{eq:Weyl_ineq}
\end{equation}
Since $|\lambda_{j}|\ge1+\epsilon$, that is, $\log|\lambda_{i}|\ge\log(1+\epsilon)$,
this implies and $\log p_{j}\le p_{j}-1$. 
\begin{equation}
M_{N}\cdot\log(1+\epsilon)\le\frac{1}{2}\sum_{j=1}^{M_{N}}\log p_{j}\le\frac{1}{2}\sum_{j=1}^{M_{N}}(p_{j}-1).\label{eq:consequence_of_Weyl_ineq}
\end{equation}

Let $\left(f_{j}\right)_{j\in\left\{ 1,\ldots M_{N}\right\} }$ denote
the associated eigenvectors of $P_{N}$ for eigenvalues $\left(p_{j}\right)_{j\in\left\{ 1,\ldots M_{N}\right\} }$.
Put $\mathcal{E}_{N}:=\mathrm{Span}\left(f_{j},j\in\left\{ 1,\ldots M_{N}\right\} \right)$.
Then we have
\begin{align}
\sum_{j=1}^{M_{N}}p_{j} & =\mathrm{Tr}\left(P_{N}\mid_{\mathcal{E}_{N}}\right)=\mathrm{Tr}\left(\mathcal{A}_{N}\mid_{\mathcal{E}_{N}}\right)+\mathrm{Tr}\left(\mathcal{B}_{N}\mid_{\mathcal{E}_{N}}\right)\nonumber \\
 & \leq\sum_{j=1}^{M_{N}}\left|\left(f_{j},\mathcal{A}_{N}f_{j}\right)\right|+\sum_{j=1}^{M_{N}}\left|\left(f_{j},\mathcal{B}_{N}f_{j}\right)\right|\nonumber \\
 & \leq\left(\sum_{j=1}^{M_{N}}\left\Vert \mathcal{A}_{N}\right\Vert \right)+\left\Vert \mathcal{B}_{N}\right\Vert _{\mathrm{Tr}}\leq M_{N}+\left(2+C_{B}\right)F\left(N\right)\label{eq:ineq_2}
\end{align}
That is
\[
\sum_{j=1}^{M_{N}}(p_{j}-1)\le\left(2+C_{B}\right)F\left(N\right).
\]
Putting this in (\ref{eq:consequence_of_Weyl_ineq}), we obtain the
conclusion.
\end{proof}
Note that the simple lemma above already gives the conclusion of Lemma
\ref{lem:The-number-of_zeros} on the outer connected component $\left\{ |z|\ge e^{\left\langle D\right\rangle +2\epsilon}\right\} $
of $W_{\epsilon}$. In order to look into the inner component of $W_{\epsilon}$,
we need a little more argument. 
\begin{lem}
\label{lem:0.5}Let $U\subset\mathbb{C}$ be an open disk and $z_{0}\in U$.
Let $f$ be a Möebius transformation such that $f\left(U\right)=\left\{ z\in\mathbb{C},\left|z\right|>1\right\} $
and $f\left(z_{0}\right)=\infty$. Suppose that $A_{N}$ and $B_{N}$
are family of bounded operators on an Hilbert space, depending on
$N\in\mathbb{N}$. Suppose also that, for some constant $C>0$ independent
of $N$, we have 
\[
\left\Vert A_{N}\right\Vert \leq C,\quad\left\Vert B_{N}\right\Vert \leq C,\quad\left\Vert \left(z_{0}-\left(A_{N}+B_{N}\right)\right)^{-1}\right\Vert \leq C
\]
 and 
\[
\left\Vert \left(z-A_{N}\right)^{-1}\right\Vert \leq C\quad\mbox{for all \ensuremath{z\in U}}.
\]
 Moreover, suppose that the operators $B_{N}$ are in the trace class
and $\left\Vert B_{N}\right\Vert _{\mathrm{Tr}}\leq F\left(N\right)$
with some function $F:\mathbb{N}\rightarrow\mathbb{R}^{+}$. Then
there exists a constant $C'>0$, which depends only on the constant
$C$, such that, for every $N$, we have
\begin{equation}
\sharp\left\{ \sigma\left(A_{N}+B_{N}\right)\cap U\right\} \leq C'\cdot F\left(N\right).\label{eq:relation_spectrum}
\end{equation}
\end{lem}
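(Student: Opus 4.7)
The plan is to reduce the counting problem to the control of zeros of a bounded scalar holomorphic function on the unit disk, and then apply Jensen's formula. The key tool is the regularised Fredholm determinant of $A_N+B_N$ relative to $A_N$, transported to $\mathbb{D}$ by the M\"obius transformation $f$.

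Concretely, I would first introduce
\[
D_N(z) := \det\bigl(I - R_N(z)B_N\bigr), \qquad R_N(z):=(z-A_N)^{-1},
\]
which is holomorphic on $U$ (since the uniform resolvent bound on $A_N$ forces $\sigma(A_N)\cap \overline U=\emptyset$), and whose zeros in $U$, counted with algebraic multiplicity, coincide with the eigenvalues of $A_N+B_N$ in $U$. Because $B_N$ is trace class with $\|B_N\|_{\mathrm{Tr}}\le F(N)$ and the resolvent $R_N(z)$ is uniformly bounded on $U$, the standard inequality $|\det(I-T)|\le\exp\|T\|_{\mathrm{Tr}}$ gives $|D_N(z)|\le\exp(CF(N))$ on $U$. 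For the matching lower bound at $z_0$, the identity
\[
(I - R_N(z_0)B_N)^{-1} = (z_0-A_N)(z_0-(A_N+B_N))^{-1}
\]
together with the hypothesis $\|(z_0-(A_N+B_N))^{-1}\|\le C$ yields $\|(I-R_N(z_0)B_N)^{-1}\|\le C_1$; writing $D_N(z_0)^{-1}=\det(I+T_N)$ with the trace-class remainder $T_N=R_N(z_0)B_N(I-R_N(z_0)B_N)^{-1}$ of trace norm $\le C_2 F(N)$ then gives $|D_N(z_0)|\ge\exp(-C_2F(N))$.

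Next I would pull back by the M\"obius biholomorphism $\phi:\mathbb{D}\to U$, defined as $\phi(w)=f^{-1}(1/w)$, which satisfies $\phi(0)=z_0$. Setting $\tilde h_N:=D_N\circ\phi$, we obtain a function holomorphic on $\mathbb{D}$ with $|\tilde h_N|\le\exp(CF(N))$, $|\tilde h_N(0)|\ge\exp(-C_2F(N))$, and whose zeros in $\mathbb{D}$ correspond bijectively with multiplicities to the eigenvalues of $A_N+B_N$ in $U$. Jensen's formula on $|w|\le r<1$ combined with $\log(r/|w_i|)\ge 1$ for $|w_i|\le r/e$ then yields
\[
\#\bigl\{w_i : |w_i|\le r/e\bigr\} \le (C+C_2)F(N),
\]
or equivalently, by applying the maximum principle to $\tilde h_N/P$ with $P$ the Blaschke product of its zeros, the weighted bound $\sum_i\log(1/|w_i|)\le (C+C_2)F(N)$.

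The main obstacle is that these estimates directly control only zeros in a strictly smaller subdisk of $\mathbb{D}$, i.e., eigenvalues in some fixed compact subset $U'\Subset U$, whereas the lemma claims a bound on \emph{all} eigenvalues in $U$: zeros close to $\partial\mathbb{D}$ contribute negligibly to the Jensen/Blaschke sum and so are not automatically counted. Closing this gap is the technical heart of the proof. Since $B_N$ is trace class (hence compact), the eigenvalues of $A_N+B_N$ outside $\sigma(A_N)$ can only accumulate on $\sigma(A_N)$, which is separated from $\overline U$ by at least $1/C$; for each fixed $N$ the eigenvalues in $\overline U$ therefore form a finite set. What remains is to turn this qualitative finiteness into a \emph{uniform} stand-off from $\partial U$: using the joint hypotheses $\|A_N\|,\|B_N\|\le C$ and $\|(z_0-(A_N+B_N))^{-1}\|\le C$, one argues that all eigenvalues of $A_N+B_N$ in $U$ actually lie in a fixed compact $U'\Subset U$ independent of $N$, hence $|w_i|\le\rho_0<1$ uniformly in $N$. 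The Jensen bound then concludes $\#\{w_i\}\le (C+C_2)F(N)/\log(1/\rho_0)$, giving (\ref{eq:relation_spectrum}) with $C'$ depending only on $C$.
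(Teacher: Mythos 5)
Your setup with the relative Fredholm determinant $D_N(z)=\det(I-R_N(z)B_N)$ and the transport to $\mathbb{D}$ is a reasonable strategy, and you correctly identify the critical obstacle: Jensen's formula weights each zero $w_i$ by $\log(1/|w_i|)$, which vanishes as $|w_i|\to 1$, so it cannot count eigenvalues of $A_N+B_N$ close to $\partial U$. The problem is that the fix you propose for this obstacle is false. There is no reason for all eigenvalues of $A_N+B_N$ in $U$ to lie in a fixed compact $U'\Subset U$ independent of $N$. Consider for instance $U=\{|z-3|<1\}$, $z_0=3$, $A_N=0$, and $B_N=\lambda_N P$ with $P$ a fixed rank-one orthogonal projection and $\lambda_N=4-1/N$. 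Then $\|A_N\|=0$, $\|B_N\|=\lambda_N\le 4$, $\|(z-A_N)^{-1}\|=1/|z|\le 1/2$ on $U$, $\|(z_0-(A_N+B_N))^{-1}\|\le 2$ for large $N$, and $\|B_N\|_{\mathrm{Tr}}=\lambda_N$ is bounded, so all hypotheses hold with a single constant $C$; yet the unique eigenvalue $\lambda_N\in U$ of $A_N+B_N$ tends to $\partial U$. The lemma's conclusion still holds in this example, but precisely because the boundary behaviour of the eigenvalues is \emph{not} what keeps the count small; your argument that ``eigenvalues stay in a compact $U'$'' does not go through.

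The paper handles this differently, and in a way that genuinely uses the operator structure rather than the scalar determinant alone. It applies the holomorphic functional calculus to transport the operators themselves: it sets $A'_N=f(A_N)$ and $C'_N=f(A_N+B_N)$ and uses the hypothesis $\|(z-A_N)^{-1}\|\le C$ on $U$ to deduce that $\sigma(A_N)$ is at distance $\ge 1/C$ from $\overline U$, hence $\sigma(A'_N)\subset\{|z|\le 1-\epsilon_C\}$ with $\epsilon_C>0$ depending only on $C$ and $f$. After a standard norm change this gives a quantitative contraction $\|A'_N\|\le 1-\delta$. Then $B'_N:=C'_N-A'_N$ is estimated in trace norm via the resolvent identity, and one applies the Weyl-inequality lemma (Lemma~\ref{lem:A+Trace_B}) to $(A'_N/(1-\delta),B'_N/(1-\delta))$: rescaling converts the target region $|z|>1$ into $|z|\ge 1+\delta/(1-\delta)$, which the Weyl lemma can count uniformly in $N$. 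The decisive step you are missing is exactly this quantitative gap $\delta>0$ between $\|f(A_N)\|$ and $1$; it is what controls the eigenvalues near $\partial U$.

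If you wanted to stay within the Jensen framework, the correct repair is not the compactness claim but an extension of the determinant estimate slightly past $\overline U$: by a Neumann-series argument, $\|(z-A_N)^{-1}\|\le C$ on $U$ forces $\|(z-A_N)^{-1}\|\le 2C$ on the $(1/(2C))$-neighbourhood of $\overline U$, so $D_N$ is holomorphic and $\exp(O(F(N)))$-bounded there; pulling back by the M\"obius map to a disk of radius $1+\eta'$ (with $\eta'>0$ depending only on $C$ and $f$) and applying Jensen's formula at radius $1+\eta'/2$ makes every zero in $\mathbb D$ contribute at least $\log(1+\eta'/2)>0$. That is a viable alternative to the paper's argument, but it is not what you wrote, and the claim you actually rely on is wrong.
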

\begin{proof}
Let $C_{N}:=A_{N}+B_{N}$. Using holomorphic functional calculus,
we define:
\begin{equation}
A'_{N}=f\left(A_{N}\right):=\frac{1}{2\pi i}\oint_{\gamma}f\left(z\right)R_{A_{N}}\left(z\right)dz,\quad\mbox{where }R_{A_{N}}\left(z\right):=\left(z-A_{N}\right)^{-1}\label{eq:f(A_n)}
\end{equation}
Here the path $\gamma=\gamma_{1}\cup\gamma_{2}$ is taken as follows
so that they enclose the spectrum of $A_{N}$ for all $N$: the closed
path $\gamma_{1}$ is taken so that $f\left(\gamma_{1}\right)$ is
the clockwise circle of radius $1-\epsilon_{C}$ with some $\epsilon_{C}>0$
(this is possible because $\mathrm{dist}\left(z,\sigma\left(A_{N}\right)\right)\geq\left\Vert R_{A_{N}}\left(z\right)\right\Vert ^{-1}\geq C^{-1}$
for $z\in U$.); the close path $\gamma_{2}$ is a counterclockwise
circle of sufficiently large radius, say, $C+\epsilon_{C}\ge\|A_{N}\|+\epsilon_{C}$.
From (\ref{eq:f(A_n)}), we see that $A_{N}^{'}$ is bounded and moreover
that $\left\Vert A'{}_{N}^{k}\right\Vert <1$ for some $k$ which
depends only on $f$ and $C$. Equivalently we can modify the norm
$\left\Vert .\right\Vert $ (in a way which depends on $f$ and $C$
only) such that $\left\Vert A'_{N}\right\Vert <1$ . Similarly we
define
\[
C'_{N}=f\left(C_{N}\right):=\frac{1}{2\pi i}\oint_{\gamma'}f\left(z\right)R_{C_{N}}\left(z\right)dz,\quad\mbox{where }R_{C_{N}}\left(z\right):=\left(z-C_{N}\right)^{-1}
\]
where $\gamma'=\gamma'_{1}\cup\gamma_{2}$ with $\gamma'_{1}$ a small
clockwise circle around the point $z_{0}$ so that it enclose the
spectrum of $C_{N}$. From the assumption $\left\Vert \left(z_{0}-\left(A_{N}+B_{N}\right)\right)^{-1}\right\Vert \leq C$,
the operator $C'_{N}$ thus defined is bounded uniformly with respect
to $N$. Let 
\[
B'_{N}:=C'_{N}-A'_{N}=\frac{1}{2\pi i}\oint_{\gamma'}f\left(z\right)\left(R_{C_{N}}\left(z\right)-R_{A_{N}}\left(z\right)\right)dz
\]
From the relation
\[
R_{C_{N}}\left(z\right)-R_{A_{N}}\left(z\right)=\frac{1}{2}\left(R_{C_{N}}\left(z\right)B_{N}R_{A_{N}}\left(z\right)+R_{A_{N}}\left(z\right)B_{N}R_{C_{N}}\left(z\right)\right)
\]
we deduce that $\left\Vert R_{C_{N}}\left(z\right)-R_{A_{N}}\left(z\right)\right\Vert _{\mathrm{Tr}}\leq CF\left(N\right)$
and then that $\left\Vert B'_{N}\right\Vert _{\mathrm{Tr}}\leq CF\left(N\right)$
with some $C>0$ independent on $N$. We can apply Lemma \ref{lem:A+Trace_B}
to the operators $A'_{N},B'_{N}$ and get that
\[
\sharp\left\{ \sigma\left(A'_{N}+B'_{N}\right)\cap\left\{ z\in\mathbb{C},\left|z\right|>1\right\} \right\} \leq C'\cdot F\left(N\right)
\]
for some $C'$ independent on $N$. By the spectral mapping theorem,
we have 
\[
\sharp\left\{ \sigma\left(A_{N}+B_{N}\right)\cap U\right\} =\sharp\left\{ \sigma\left(A'_{N}+B'_{N}\right)\cap\left\{ z\in\mathbb{C},\left|z\right|>1\right\} \right\} .
\]
So we get (\ref{eq:relation_spectrum}).
\end{proof}
Now we apply this lemma to the setting mentioned in the beginning.
We take finitely many disks $U_{i}\subset\mathbb{C}$ so that we may
apply Lemma \ref{lem:0.5} to each $U_{i}$. (In particular we take
$U_{i}$ so that it intersects the region $e^{-\langle D\rangle}(r_{1}^{+}+\varepsilon)^{n}<|z|<e^{-\langle D\rangle}(r_{0}^{-}-\varepsilon)^{n}$
for small $\varepsilon>0$ where the resolvent $(z-A_{N}-B_{N})^{-1}$
should be bounded uniformly in $N$ and $n$, and where we take the
pint $z_{0}$ in Lemma \ref{lem:0.5}.) By covering the inner connected
component of the region $e^{-\langle D\rangle}W_{\epsilon}$ by such
disks, we conclude Lemma \ref{lem:trace_norm_estiamate}.

\subsubsection{Proof of Lemma \ref{lem:The-number-of_zeros} using Jensen formula}

From the expression $\mathcal{L}_{D_{n},N}=T_{n,N}+S_{n,N}$, we write
$z-\mathcal{L}_{D_{n},N}$ for $z\in W_{\epsilon}$ as 
\begin{equation}
z-\mathcal{L}_{D_{n},N}=z-T_{n,N}-S_{n,N}=\left(z-T_{n,N}\right)\left(1-\mathcal{K}\left(z\right)\right)\label{eq:eqL1}
\end{equation}
with setting
\[
\mathcal{K}\left(z\right):=\left(z-T_{n,N}\right)^{-1}S_{n,N}.
\]
Since $z-T_{n,N}$ is invertible for $z\in W_{\epsilon}$ from (\ref{eq:bound_on_resolvent_of_T}),
we have 
\[
\left(z-T_{n,N}\right)^{-1}\left(z-\mathcal{L}_{D_{n},N}\right)=\left(1-\mathcal{K}\left(z\right)\right)\quad\mbox{for \ensuremath{z\in W_{\epsilon}}.}
\]
Since $\mathcal{K}(x)$ is a finite rank operators (because so is
$S_{n,N}$), we can define 
\[
\mathcal{D}\left(z\right):=\det\left(1-\mathcal{K}\left(z\right)\right)
\]
and see that the eigenvalues of $\mathcal{L}_{D_{n},N}$ in $W_{\epsilon}$
coincide with the zeroes of $\mathcal{D}\left(z\right)$ in $W_{\epsilon}$
up to multiplicity. From the formula $\log\mathcal{D}\left(z\right)=\mathrm{Tr}\log\left(1-\mathcal{K}\left(z\right)\right)$
and the simple inequality $\log\left(1-x\right)\leq x$ for $x>-1$,
we have that, for $N>\max\left(N_{\epsilon},N_{\epsilon'}\right)$,
\begin{align}
\log\left|\mathcal{D}\left(z\right)\right| & \leq\left\Vert \mathcal{K}\left(z\right)\right\Vert _{\mathrm{Tr}}\leq\left\Vert \left(z-T_{n,N}\right)^{-1}\right\Vert \left\Vert S_{n,N}\right\Vert _{\mathrm{Tr}}\underset{}{\leq}C_{\epsilon}\epsilon'N^{d}\le\epsilon N^{d}\label{eq:ine_D_1}
\end{align}
from (\ref{eq:bound_on_resolvent_of_T}) and (\ref{eq:bound_trace_S}).
In the last inequality, we chose $\epsilon'>0$ such that $C_{\epsilon}\epsilon'<\epsilon$.
We next show that $\log\left|\mathcal{D}\left(z\right)\right|$ is
not too small on some part of the region $W_{\epsilon}$. More precisely,
we let
\[
W_{\epsilon}':=\left\{ z\in\mathbb{C},\quad\frac{1}{\lambda}e^{\sup D_{n}+2\epsilon}\leq\left|z\right|\leq e^{\inf D_{n}-3\epsilon}\mbox{ or }e^{\sup D_{n}+3\epsilon}\leq\left|z\right|\right\} \subset W_{\epsilon}
\]
and show that 
\begin{equation}
\log|\mathcal{D}(z)|\ge-\epsilon N^{d}\quad\mbox{for \ensuremath{z\in W'_{\epsilon}}.}\label{eq:ineg_D_2}
\end{equation}
From Theorem \ref{thm:band_structure} applied to $\mathcal{L}_{D_{n},N}$,
we see that $W'_{\epsilon}$ is in a gap (or on the outside) of the
bands given there and that 
\[
\left\Vert \left(z\mathrm{Id}-\mathcal{L}_{D_{n},N}\right)^{-1}\right\Vert \leq C_{\epsilon}\quad\mbox{for all }z\in W'_{\epsilon}.
\]
So, from (\ref{eq:eqL1}), we see that $1-\mathcal{K}(z)$ for $z\in W_{\epsilon}'$
is invertible and that 
\begin{align*}
\left(1-\mathcal{K}\left(z\right)\right)^{-1} & =\left(z-\mathcal{L}_{D_{n},N}\right)^{-1}\left(z-T_{n,N}\right)=\left(z-\mathcal{L}_{D_{n},N}\right)^{-1}\left(z-\mathcal{L}_{D_{n},N}+S_{n,N}\right)\\
 & =1+\left(z-\mathcal{L}_{D_{n},N}\right)^{-1}S_{n,N}.
\end{align*}
Hence, similarly to (\ref{eq:ine_D_1}), we get

\begin{align*}
-\log\left|\mathcal{D}\left(z\right)\right| & \leq\left\Vert \left(z-\mathcal{L}_{D_{n},N}\right)^{-1}S_{n,N}\right\Vert _{\mathrm{Tr}}\leq\left\Vert \left(z-\mathcal{L}_{D_{n},N}\right)^{-1}\right\Vert \left\Vert S_{n,N}\right\Vert _{\mathrm{Tr}}\leq C_{\epsilon}\epsilon'N^{d}<\epsilon N^{d}
\end{align*}
for $N>\max\left(N_{\epsilon},N_{\epsilon'}\right)$. 

Finally, we employ a theorem, Jensen's formula \cite[p.236]{paatero_book_69},
in complex analysis to show that the inequalities (\ref{eq:ine_D_1})
and (\ref{eq:ineg_D_2}) imply that, for arbitrarily small $\tilde{\epsilon}>0$,
we have 
\begin{equation}
\sharp\left\{ z\in W_{\epsilon},\mathcal{D}\left(z\right)=0\right\} <\tilde{\epsilon}N^{d}\label{eq:number_of_zeroes}
\end{equation}
for sufficiently large $N$. This finish the proof of Lemma \ref{lem:The-number-of_zeros}.

We cover the domain $W_{\epsilon}$ by finitely many open topological
disks $D_{i}$, $1\leq i\leq l$, so that $D_{i}\cap W'_{\epsilon}\neq\emptyset$
for every $i$. Let $\Phi_{i}:D_{i}\rightarrow\mathbb{D}:=\left\{ w\in\mathbb{C},\left|w\right|<1\right\} $
be a Riemann mapping such that $\Phi_{i}\left(z_{i}\right)=0$ for
some $z_{i}\in D_{i}\cap W'_{\epsilon}$. Take $\delta>0$ so small
that $\bigcup_{i}\Phi_{i}^{-1}\left(\left\{ \left|w\right|<1-\delta\right\} \right)$
also cover $W_{\epsilon}$. Let
\[
\mathcal{D}_{i}\left(w\right):=\mathcal{D}\left(\Phi_{i}\left(w\right)\right)
\]
which is a holomorphic function on $\mathbb{D}$ with zeroes $w_{j}\in\mathbb{D}$,
which correspond to the zeroes of $\mathcal{D}\left(z\right)$. 

Jensen's formula writes
\[
\sum_{j}-\log\left|w_{j}\right|=-\log\left|\mathcal{D}_{i}\left(0\right)\right|+\frac{1}{2\pi}\int_{\left|w\right|=1}\log\left|\mathcal{D}_{i}\left(w\right)\right|dw
\]
Inequality (\ref{eq:ineg_D_2}) gives $-\log\left|\mathcal{D}_{i}\left(0\right)\right|<\epsilon N^{d}$,
and (\ref{eq:ine_D_1}) gives $\frac{1}{2\pi}\int_{\left|w\right|=1}\log\left|\mathcal{D}_{i}\left(w\right)\right|dw<\epsilon N^{d}$.
For the zero $w_{j}$ in the disk $\left\{ \left|w\right|<1-\delta\right\} $
we have $-\log\left|w_{j}\right|>-\log\left(1-\delta\right)\geq\delta$
and, hence, we see that the number $\mathcal{N}_{i}$ of such zeros
is bounded:
\[
\mathcal{N}_{i}\leq\sum_{j,\left|w_{j}\right|<1-\delta}\frac{-\log\left|w_{j}\right|}{\delta}\leq\frac{1}{\delta}\left(\sum_{j,w_{j}\in\mathbb{D}}-\log\left|w_{j}\right|\right)<\frac{\epsilon N^{d}}{\delta}.
\]
Summing over the sets $D_{i}$ and noting arbitrariness of $\epsilon>0$,
we conclude (\ref{eq:number_of_zeroes}).

\subsection{Proof of equidistribution of the arguments of the resonances\label{sub:Proof-of-equidistribution}}

In this subsection we prove the second part of Theorem \ref{Thm:Distribution-of-resonances.},
namely the equidistribution of the arguments. We write the eigenvalues
of $\hat{\mathcal{F}}_{\hbar}$ as
\[
\lambda_{j}=\rho_{j}e^{i\theta_{j}},\quad\rho_{j}\geq0,\;\theta_{j}\in\mathbb{R},\; j=1,2,\ldots,\mathcal{N}_{\hbar},
\]
with $\mathcal{N}_{\hbar}:=\mathrm{dim}\mathcal{H}_{\hbar}$. Let
us consider the following distribution (for fixed $\hbar$) in $\theta$
on the circle $S^{1}$:
\[
s_{\hbar}:=\frac{1}{\mathcal{N}_{\hbar}}\sum_{j=1}^{\mathcal{N}_{\hbar}}\delta\left(\theta-\theta_{j}\right)
\]
We want to show that $s_{\hbar}$ converges (weakly) to the uniform
probability measure on $S^{1}$ in the limit $\hbar\rightarrow0$.
This is equivalent to show that, for every fixed $n\in\mathbb{Z}$,
\begin{eqnarray}
\langle e^{in\theta},s_{\hbar}\rangle & \underset{\hbar\rightarrow0}{\longrightarrow}\begin{cases}
1, & \mbox{if }n=0;\\
0, & \mbox{otherwise. }
\end{cases}\label{eq:distribution_arguments}
\end{eqnarray}

Let us show now (\ref{eq:distribution_arguments}). If $n=0$, we
have simply $\langle1,s_{\hbar}\rangle=\frac{1}{\mathcal{N}_{\hbar}}\mathcal{N}_{\hbar}=1$.
Suppose $n>0$, since $\langle e^{-in\theta},s_{\hbar}\rangle=\overline{\langle e^{in\theta},s_{\hbar}\rangle}$).
Let $r=e^{\left\langle V-V_{0}\right\rangle }$. We write
\[
\langle e^{in\theta},s_{\hbar}\rangle=\frac{1}{\mathcal{N}_{\hbar}}\sum_{j=1}^{\mathcal{N}_{\hbar}}e^{in\theta_{j}}=\frac{1}{\mathcal{N}_{\hbar}}\sum_{j=1}^{\mathcal{N}_{\hbar}}\left(1-\frac{\rho_{j}}{r}\right)e^{in\theta_{j}}+\frac{1}{\mathcal{N}_{\hbar}}\sum_{j=1}^{\mathcal{N}_{\hbar}}\frac{\rho_{j}}{r}e^{in\theta_{j}}
\]
Since we have 
\[
\sum_{j=1}^{\mathcal{N}_{\hbar}}\rho_{j}e^{in\theta_{j}}=\mathrm{Tr}\hat{\mathcal{F}}_{\hbar}^{n},
\]
we see
\[
\left|\langle e^{in\theta},s_{\hbar}\rangle\right|\leq\frac{1}{\mathcal{N}_{\hbar}}\sum_{j=1}^{\mathcal{N}_{\hbar}}\left|1-\frac{\rho_{j}}{r}\right|+\frac{1}{r\mathcal{N}_{\hbar}}\left|\mathrm{Tr}\hat{\mathcal{F}}_{\hbar}^{n}\right|.
\]
From the accumulation of the moduli of eigenvalues to $r$, proved
in the last section, we have 
\[
\frac{1}{\mathcal{N}_{\hbar}}\sum_{j=1}^{\mathcal{N}_{\hbar}}\left|1-\frac{\rho_{j}}{r}\right|\underset{\hbar\rightarrow0}{\longrightarrow}0.
\]
Therefore it is enough to show, for each fixed $n>0$, that 
\begin{equation}
\frac{1}{\mathcal{N}_{\hbar}}\left|\mathrm{Tr}\hat{\mathcal{F}}_{\hbar}^{n}\right|\underset{\hbar\rightarrow0}{\longrightarrow}0.\label{eq:limit_trace}
\end{equation}
From the results given in Subsection \ref{sub:1.7Semiclassical-calculus-on},
we see that the trace of $\hat{\mathcal{F}}_{\hbar}^{n}=\Pi_{\hbar}\circ\hat{F}_{\hbar}^{n}\circ\Pi_{\hbar}$
is expressed as the sum of contributions from its restriction to neighborhoods
of the finite number of fixed points $x=f^{n}\left(x\right)$ of $f^{n}$
and obtain (\ref{eq:limit_trace}). We finished the proof of Theorem
\ref{Thm:Distribution-of-resonances.}.

\section{\label{sec:10}Proof of Th. \ref{Th1.5}. Gutzwiller trace formula. }

\subsection{The Atiyah-Bott trace formula}

In this subsection, we recall the Atiyah-Bott trace formula in a general
setting\cite[cor.5.4,p.393]{atiyah_67}:

\begin{center}{\color{red}\fbox{\color{black}\parbox{16cm}{
\begin{defn}
\label{Def:AB_trace}``\textbf{Flat Trace of a transfer operator}''.
Suppose that $f:M\rightarrow M$ is a smooth diffeomorphism on a manifold
$M$ whose periodic points are all hyperbolic, that $\pi:E\rightarrow M$
is a vector bundle and that $B:E\rightarrow E$ a vector bundle map
projecting on $f$, i.e. such that the following diagram commutes:
\begin{eqnarray*}
E & \overset{B}{\longrightarrow} & E\\
\pi\downarrow &  & \downarrow\pi\\
M & \overset{f}{\longrightarrow} & M
\end{eqnarray*}
We can define the associated \textbf{transfer operator }acting on
smooth sections of this vector bundle
\begin{equation}
\hat{F}:C^{\infty}\left(M,E\right)\rightarrow C^{\infty}\left(M,E\right),\qquad\left(\hat{F}u\right)\left(x\right):=B_{f^{-1}\left(x\right)}\left(u\left(f^{-1}\left(x\right)\right)\right)\label{eq:def_general_transfer_op_bundle}
\end{equation}
For any $n\geq1$, the \textbf{flat trace of the transfer operator}
$\hat{F}^{n}$ is defined by
\begin{equation}
\mathrm{Tr}^{\flat}\left(\hat{F}^{n}\right):=\int_{M}\mathrm{Tr}\left(K_{n}\left(x,x\right)\right)dx\label{eq:def_Trace_flat}
\end{equation}
where $K_{n}\left(x,y\right)dy\in\mathcal{L}\left(E_{y}\rightarrow E_{x}\right)$
denotes the Schwartz kernel of $\hat{F}^{n}$.
\end{defn}
}}}\end{center}

For $n\geq1$, let $B_{x}^{\left(n\right)}:=\prod_{k=0}^{n-1}B_{f^{k}\left(x\right)}:E_{x}\rightarrow E_{f^{n}\left(x\right)}$.
For a periodic point $x=f^{n}\left(x\right)$ then $B_{x}^{\left(n\right)}$
is an endomorphism on $E_{x}$ and $\mathrm{Tr}\left(B_{x}^{\left(n\right)}\right)$
is well defined and does not depend on $x$ on the orbit.

\begin{center}{\color{blue}\fbox{\color{black}\parbox{16cm}{
\begin{lem}
\label{lem:Atiyah-Bott_flat_trace}For any $n\geq1$, the \textbf{Atiyah-Bott
trace formula} reads:
\begin{equation}
\mathrm{Tr}^{\flat}\left(\hat{F}^{n}\right)=\sum_{x=f^{n}\left(x\right)}\frac{\mathrm{Tr}\left(B_{x}^{\left(n\right)}\right)}{\left|\mathrm{det}\left(I-Df_{x}^{n}\right)\right|},\label{eq:Atiyah-Bott}
\end{equation}
This is a finite sum over periodic points of $f$ with period $n$.
\end{lem}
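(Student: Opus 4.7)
The plan is to establish the Atiyah-Bott trace formula by directly computing the Schwartz kernel of $\hat{F}^n$ as an explicit distribution and then localizing it to the diagonal. First, iterating the defining formula (\ref{eq:def_general_transfer_op_bundle}) $n$ times yields
\[
(\hat{F}^n u)(x)=\mathbf{B}_n(x)\bigl(u(f^{-n}(x))\bigr),\qquad \mathbf{B}_n(x):=B_{f^{-1}(x)}\circ B_{f^{-2}(x)}\circ\cdots\circ B_{f^{-n}(x)},
\]
where $\mathbf{B}_n(x)\in\mathrm{Hom}(E_{f^{-n}(x)},E_x)$. Hence the Schwartz kernel of $\hat{F}^n$ (viewed as a density in $y$) is the $\mathrm{Hom}(E_y,E_x)$-valued distribution
\[
K_n(x,y)=\mathbf{B}_n(x)\cdot\delta\bigl(y-f^{-n}(x)\bigr),
\]
supported on the graph of $f^{-n}$ in $M\times M$.

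Next, I would restrict $K_n$ to the diagonal $\Delta\subset M\times M$. This diagonal meets the graph of $f^{-n}$ exactly at the finitely many fixed points of $f^n$. The hyperbolicity assumption guarantees that at every such $x_0=f^n(x_0)$ the differential $I-(Df_{x_0}^n)^{-1}$ is invertible, so the map $g(x):=x-f^{-n}(x)$ has only nondegenerate zeros and the graph of $f^{-n}$ meets $\Delta$ transversally. This transversality puts one squarely in the setting where H\"ormander's wavefront-set calculus allows the pullback of $K_n$ to $\Delta$ to be defined as a distribution on $M$, making $\mathrm{Tr}^\flat(\hat{F}^n)=\int_M\mathrm{Tr}(K_n(x,x))\,dx$ a meaningful finite sum of local contributions at the periodic points.

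Finally, for each periodic point $x_0=f^n(x_0)$ I would pass to local coordinates and bundle trivializations. By periodicity, $\mathbf{B}_n(x_0)$ becomes an endomorphism of $E_{x_0}$ which coincides, up to conjugation along the orbit, with $B_{x_0}^{(n)}$, so its trace is well defined and intrinsic. Applying the standard localization formula for the Dirac delta at a nondegenerate zero,
\[
\int\varphi(x)\,\delta(g(x))\,dx=\frac{\varphi(x_0)}{|\det Dg(x_0)|},
\]
together with the elementary determinantal identity $|\det(I-(Df_{x_0}^n)^{-1})|\cdot|\det Df_{x_0}^n|=|\det(I-Df_{x_0}^n)|$, yields the contribution $\mathrm{Tr}(B_{x_0}^{(n)})/|\det(I-Df_{x_0}^n)|$ from each periodic point. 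Summing over periodic points gives (\ref{eq:Atiyah-Bott}).

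The main subtle step—and the one I would carry out with the most care—is the bookkeeping of Jacobian factors on the diagonal: the formula comes out with $|\det(I-Df_{x_0}^n)|$ rather than $|\det(I-(Df_{x_0}^n)^{-1})|$ precisely because the Schwartz kernel $K_n(x,y)\,dy$ in Definition \ref{Def:AB_trace} is treated as a density in $y$, which absorbs a factor of $|\det Df_{x_0}^n|$ when one restricts to $\Delta$. Once this convention is fixed, the computation is the classical Atiyah-Bott localization, and all the steps above are routine.
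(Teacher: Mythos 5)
Your overall strategy is the same as the paper's: write the Schwartz kernel of $\hat{F}^n$ as $\mathbf{B}_n(x)\,\delta\bigl(y-f^{-n}(x)\bigr)$, restrict to the diagonal, and localize the delta at the (transverse, by hyperbolicity) fixed points of $f^n$. The extra remarks about wavefront sets and transversality are a harmless elaboration of what the paper does by a change of variables.

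However, your treatment of the Jacobian factor in the last step is wrong, and this is precisely the step you flag as the ``main subtle one.'' Applying the localization formula with $g(x)=x-f^{-n}(x)$, so that $Dg(x_0)=I-(Df^n_{x_0})^{-1}$, gives the contribution
\[
\frac{\mathrm{Tr}\bigl(B_{x_0}^{(n)}\bigr)}{\bigl|\det\bigl(I-(Df^n_{x_0})^{-1}\bigr)\bigr|}
=\frac{\mathrm{Tr}\bigl(B_{x_0}^{(n)}\bigr)\cdot\bigl|\det Df^n_{x_0}\bigr|}{\bigl|\det\bigl(I-Df^n_{x_0}\bigr)\bigr|},
\]
where the last equality is exactly the determinantal identity you quote. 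This is \emph{not} $\mathrm{Tr}\bigl(B_{x_0}^{(n)}\bigr)/\bigl|\det(I-Df^n_{x_0})\bigr|$; there is an extra factor $|\det Df^n_{x_0}|$. Your explanation that this factor is ``absorbed'' because the kernel $K_n(x,y)\,dy$ is a density in $y$ is not a valid mechanism: the localization formula $\int\varphi\,\delta(g(x))\,dx=\varphi(x_0)/|\det Dg(x_0)|$ already \emph{is} the correct statement for the density convention used in Definition \ref{Def:AB_trace}, and no further Jacobian appears when one sets $y=x$. The factor $|\det Df^n_{x_0}|$ is genuine, and the reason it drops out here is that $f$ preserves the symplectic volume form, so $\det Df^n_{x_0}=1$; this is exactly the observation the paper makes at the end of its proof. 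Without that hypothesis the correct denominator is $\bigl|\det\bigl(I-(Df^n_{x_0})^{-1}\bigr)\bigr|$, not $\bigl|\det(I-Df^n_{x_0})\bigr|$.
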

}}}\end{center}
\begin{proof}
(As in \cite[cor.5.4,p.393]{atiyah_67}). From (\ref{eq:def_general_transfer_op_bundle})
we have
\[
\left(\left(\hat{F}^{n}u\right)\left(x\right)\right)_{i}=\left(B_{f^{-n}\left(x\right)}^{\left(n\right)}\left(u\left(f^{-n}\left(x\right)\right)\right)\right)_{i}=\sum_{j}\int_{M}\delta_{f^{-n}\left(x\right)}\left(y\right)\left(B_{y}^{\left(n\right)}\right)_{i,j}u_{j}\left(y\right)dy
\]
where $i,j=1\ldots\mathrm{dim}E$ are indices for components in the
fibers $E_{y}$ with respect to some local trivialization and $\delta_{x}$
is the Dirac distribution at $x$. So the Schwartz kernel of $\hat{F}^{n}$
is

\[
K_{n}\left(x,y\right)=B_{y}^{\left(n\right)}\delta_{f^{-n}\left(x\right)}\left(y\right)
\]
hence
\[
\mathrm{Tr}\left(K_{n}\left(x,x\right)\right)=\delta_{f^{-n}\left(x\right)}\left(x\right)\mathrm{Tr}B_{x}^{\left(n\right)}.
\]
From definition (\ref{eq:def_Trace_flat}),
\begin{eqnarray*}
\mathrm{Tr}^{\flat}\left(\hat{F}^{n}\right) & = & \int_{M}\delta\left(x-f^{-n}\left(x\right)\right)\mathrm{Tr}B_{x}^{\left(n\right)}dx\\
 & = & \sum_{x=f^{n}\left(x\right)}\mathrm{Tr}B_{x}^{\left(n\right)}\frac{1}{\left|\mathrm{det}\left(I-Df_{x}^{-n}\right)\right|}=\sum_{x=f^{n}\left(x\right)}\frac{\mathrm{Tr}B_{x}^{\left(n\right)}}{\left|\mathrm{det}\left(I-Df_{x}^{n}\right)\right|}
\end{eqnarray*}
where in the second line we have used the change of variable $x\rightarrow y=x-f^{-n}\left(x\right)$
in the vicinity of $y=0$. For the last equality we have used that
$\mathrm{det}Df_{x}^{n}=1$ since $f$ preserves the volume form.\end{proof}
\begin{rem}
A standard example of bundle map is the differential map $Df:TM\rightarrow TM$
or extension in tensor bundles as $f_{\left(k\right)}:\Lambda^{k}\left(T^{*}M\right)\rightarrow\Lambda^{k}\left(T^{*}M\right)$
(in antisymmetric tensor bundle of order $k$ that we will use), etc.
\end{rem}

\subsection{The Gutzwiller Trace formula from the Atiyah-Bott trace formula}

In this subsection we show how the Gutzwiller trace formula can be
expressed as a sum of flat traces of transfer operators acting on
differential forms on the Grassmann extension $G$. We first extend
the argument in the last section to the case of smooth diffeomorphisms
on the Grassmann bundle $G$. This is rather trivial. We just replace
the manifold $M$ and the diffeomorphism $f$ by the Grassman bundle
$G$ and the natural extension $f_{G}$ of $f$. Lemma \ref{lem:Atiyah-Bott_flat_trace}
remains true for such extension. Below we will always truncate the
transfer operators on $G$ to a small neighborhood $K_{0}$ of the
attracting section $E_{u}$ which represents the unstable subbundle.
(Recall the argument in Subsection \ref{sub:Truncation-near_Eu}.)
So, on the right hand side of (\ref{eq:Atiyah-Bott}), the sum will
be over only the periodic points of $f_{G}$ contained in $K_{0}$.
Those periodic points must be contained in $E_{u}$ and in one-to-one
correspondence to the periodic points for $f:M\to M$. 

The next lemma shows how the amplitude $\left|\mbox{det}\left(1-Df_{x}^{n}\right)\right|^{-1/2}$
appear and the following corollary shows how the phases $e^{iS_{n,x}/\hbar}$
appear. For $l\in G$, we will denote
\[
V_{l}G:=\left(T_{l}G\right)_{\mathrm{vert}}=\mathrm{ker}Dp_{l},
\]
the vertical part of the tangent space. Then $VG=\cup_{l\in G}V_{l}G$
is a smooth subbundle of $TG$. 
\begin{lem}
\label{lem:Gutz_1}For $0\leq k\leq d$, $0\leq m\leq d^{2}$, let
$\Lambda^{\left(k,m\right)}\rightarrow G$ be the vector bundle with
fiber $\Lambda^{\left(k,m\right)}\left(l\right):=\Lambda^{k}\left(T_{l}G\right)\otimes\Lambda^{m}\left(V_{l}G\right)$
over $l\in G$ and let $F_{k,m}$ be the transfer operator $F_{k,m}:C^{\infty}\left(G,\Lambda^{\left(k,m\right)}\right)\rightarrow C^{\infty}\left(G,\Lambda^{\left(k,m\right)}\right)$
defined by
\begin{equation}
F_{k,m}u(l)=\chi\circ p(f_{G}^{-1}(l)))\cdot\frac{\left|\mathrm{det}\left(Df_{x}\mid l\right)\right|^{1/2}}{\left|\mathrm{det}\left(Df_{G}\mid_{V_{l}G}\left(l\right)\right)\right|}\left(\Lambda^{k}\left(Df_{/l}^{-1}\right)\otimes\Lambda^{m}\left(Df\mid_{V_{l}G}\right)\right)u\left(f_{G}^{-1}\left(l\right)\right)\label{eq:def_F_km}
\end{equation}
where $\chi:G\to[0,1]$ is a smooth function such that $\chi(l)=0$
for $l\notin K_{0}$ and $\chi(l)=1$ for $l\in f_{G}(K_{0})$ and
$p:G\to M$ is the projection. Then the Atiyah-Bott trace formula
gives:
\begin{equation}
\sum_{x=f^{n}\left(x\right)}\frac{1}{\sqrt{\left|\mathrm{det}\left(1-Df_{x}^{n}\right)\right|}}=\sum_{k=0}^{d}\sum_{m=0}^{d^{2}}\left(-1\right)^{k+m}\mathrm{Tr}^{\flat}\left(F_{k,m}^{n}\right)\label{eq:Gutz_f1}
\end{equation}
\end{lem}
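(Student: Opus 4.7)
The strategy is to apply the Atiyah-Bott flat trace formula (Lemma \ref{lem:Atiyah-Bott_flat_trace}) to each of the finitely many transfer operators $F_{k,m}^{n}$ on the Grassmannian bundle and then collapse the alternating double sum via the elementary identity $\sum_{k}(-1)^{k}\operatorname{Tr}(\Lambda^{k}M)=\det(1-M)$ applied in both the horizontal and vertical directions. The scalar amplitude in the definition of $F_{k,m}$ is precisely tuned so that, after this collapse, the remaining determinantal factors rearrange via the symplectic relation between $Df|_{E_{u}}$ and $Df|_{E_{s}}$ into $|\det(1-Df_{x}^{n})|^{-1/2}$.

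In more detail, first I would identify the periodic orbits of $f_{G}$ that contribute. Because $\operatorname{supp}\chi\subset K_{0}$ and $(F_{k,m}\hat{\chi})^{n}=F_{k,m}^{n}\hat{\chi}$ up to cutoffs supported in $K_{n}$, only periodic points inside $\bigcap_{n\ge 1}K_{n}=E_{u}$ contribute. Since $E_{u}\subset G$ is an $f_{G}$-invariant section, its periodic points are exactly $l_{0}:=E_{u}(x)$ for $x=f^{n}(x)$, each with multiplicity one. Thus the Atiyah-Bott formula for $F_{k,m}^{n}$ reduces to a sum over periodic points $x=f^{n}(x)$ of $f$. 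Next I would compute the linearised data at such a point $l_{0}$: the decomposition $T_{l_{0}}G=T_{x}M\oplus V_{l_{0}}G$ is invariant under $Df_{G}^{n}$, the horizontal action is $Df_{x}^{n}$, and on $V_{l_{0}}G\cong\operatorname{Hom}(E_{u}(x),E_{s}(x))$ the map $Df_{G}^{n}$ acts by $\phi\mapsto Df^{n}|_{E_{s}}\circ\phi\circ(Df^{n}|_{E_{u}})^{-1}$. Writing $A:=Df^{n}|_{E_{u}(x)}$ and using the symplectic/Lagrangian identity $Df^{n}|_{E_{s}}=(A^{-1})^{\top}$ in a dual basis, one obtains the factorisations
\begin{equation*}
|\det(1-Df_{G}^{n})|_{l_{0}}=|\det(1-Df_{x}^{n})|\cdot|\det(1-Df_{G}^{n}|_{V_{l_{0}}G})|,\qquad |\det(1-Df_{x}^{n})|=|\det A|^{-1}|\det(1-A)|^{2}.
\end{equation*}

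Then I would perform the alternating double sum. Applying $\sum_{k=0}^{d}(-1)^{k}\operatorname{Tr}(\Lambda^{k}\cdot)=\det(1-\cdot)$ in the horizontal index and the analogous identity in the vertical index $m$, one finds
\begin{equation*}
\sum_{k,m}(-1)^{k+m}\operatorname{Tr}\!\bigl(\Lambda^{k}(Df_{/l_{0}}^{n})^{-1}\otimes\Lambda^{m}Df_{G}^{n}|_{V_{l_{0}}G}\bigr)=\det(1-A^{-1})\cdot\det(1-Df_{G}^{n}|_{V_{l_{0}}G}).
\end{equation*}
The vertical determinant here cancels (up to sign, which is uniform since eigenvalues of $Df_{G}^{n}|_{V_{l_{0}}G}$ are of the form $(a_{i}a_{j})^{-1}$ with $|a_{i}|>1$, so $\det(1-Df_{G}^{n}|_{V_{l_{0}}G})>0$) against $|\det(1-Df_{G}^{n}|_{V_{l_{0}}G})|$ coming from the Atiyah-Bott denominator. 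Combined with the iterated amplitude $\prod_{k=0}^{n-1}\frac{|\det(Df|l_{k})|^{1/2}}{|\det(Df_{G}|_{V_{l_{k}}G})|}$ along the orbit and the identities $\det(1-A^{-1})=(-1)^{d}\det A^{-1}\det(1-A)$ and $(-1)^{d}\det(1-A)=|\det(1-A)|$, the remaining expression collapses to $|\det A|^{1/2}/|\det(1-A)|=|\det(1-Df_{x}^{n})|^{-1/2}$, which is exactly the summand on the left of (\ref{eq:Gutz_f1}).

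The principal difficulty is bookkeeping: verifying that the amplitude normalisation $|\det(Df|l)|^{1/2}/|\det(Df_{G}|_{V_{l}G})|$ iterated along the period produces exactly the factor of $|\det A|^{?}$ needed so that the vertical $m$-cancellation and the symplectic identity $|\det(1-Df|_{E_{s}})|=|\det A|^{-1}|\det(1-A)|$ conspire to yield the square root $|\det(1-Df_{x}^{n})|^{-1/2}$, and checking that all signs from $(-1)^{k+m}$, from $\det(1-A^{-1})$, and from the vertical $\det$ combine to the correct positive sign. This is essentially a one-page symbolic calculation at a single periodic point, after which summing over $x=f^{n}(x)$ yields the claim.
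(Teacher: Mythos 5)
Your strategy matches the paper's: apply the Atiyah--Bott flat trace to each $F_{k,m}$ on $G$, localize to the periodic points on $E_{u}$, use the alternating-sum identity $\det(1-M)=\sum_{k}(-1)^{k}\operatorname{Tr}(\Lambda^{k}M)$ in both the horizontal and vertical factors, and rearrange via the symplectic identification $Df^{n}|_{E_{s}}\cong{}^{t}(Df^{n}|_{E_{u}})^{-1}$. The organization is slightly different from the paper (the paper first compares the $k=m=0$ amplitude $\mathcal{A}_{x}$ to the Gutzwiller amplitude $\mathcal{B}_{x}$ and then expands the ratio), but these are the same computation.

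There is, however, a genuine gap in your bookkeeping. You write the Atiyah--Bott denominator at a periodic orbit as $|\det(1-Df_{G}^{n})|_{l_{0}}=|\det(1-Df_{x}^{n})|\cdot|\det(1-Df_{G}^{n}|_{V_{l_{0}}G})|$. But the form $|\det(1-Df^{n})|$ of the flat trace formula (paper's Lemma \ref{lem:Atiyah-Bott_flat_trace}) is derived \emph{under the assumption} $\det Df^{n}=1$: the change of variable $y=x-f^{-n}(x)$ produces the Jacobian $|\det(1-Df^{-n}_{x})|$, and only volume preservation converts it to $|\det(1-Df^{n}_{x})|$. The map $f_{G}:G\to G$ is \emph{not} volume-preserving --- it contracts the fiber of $G$ --- so the correct denominator is $|\det(1-Df_{G}^{-n})|=|\det Df_{G}^{n}|_{V}|^{-1}\cdot|\det(1-Df_{x}^{n})|\,|\det(1-Df_{G}^{n}|_{V_{l_{0}}G})|$ (using $|\det Df_{x}^{n}|=1$). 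This extra factor $|\det Df_{G}^{n}|_{V}|^{-1}$ is exactly what cancels the $|\det(Df_{G}^{n}|_{V_{l_{0}}G})|^{-1}$ in your iterated scalar amplitude. As you have written it, that amplitude factor dangles and your ``remaining expression'' is off by $|\det Df_{G}^{n}|_{V}|^{-1}$; your conclusion only holds after correcting the Jacobian. (This is also why the paper, in computing $\mathcal{A}_{x}$, explicitly carries $\left|\det\left(1-D\tilde{f}_{G}^{-n}\right)\right|$ in the denominator.) A minor secondary issue: your stated identity $(-1)^{d}\det(1-A)=|\det(1-A)|$ is false in general (take $A=\operatorname{diag}(2,-2)$, $d=2$: $\det(1-A)=-3$); what is actually needed, and true, is $\det(1-A^{-1})>0$ because all eigenvalues of $A^{-1}$ lie strictly inside the unit disk.
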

\begin{proof}
We first derive an useful expression for the differential of the map
$f_{G}:G\rightarrow G$. For $x\in M$, we have a continuous decomposition
of the symplectic tangent space

\[
T_{x}M=E_{u}\left(x\right)\oplus E_{s}\left(x\right),
\]
where $E_{u}\left(x\right)$ and $E_{s}\left(x\right)$ are linear
Lagrangian space. We will use the fact that the symplectic form $\omega$
provides an isomorphism between these complementary Lagrangian subspaces
\begin{equation}
E_{u}\left(x\right)\overset{\omega}{\equiv}E_{s}^{*}\left(x\right)\label{eq:isomorphism}
\end{equation}
by 
\[
U\in E_{u}\left(x\right)\longleftrightarrow\omega\left(U,.\right)\in E_{s}^{*}\left(x\right)
\]

The symplectic map $f:M\rightarrow M$ gives a symplectic hyperbolic
linear map
\[
Df_{x}:T_{x}M\rightarrow T_{f\left(x\right)}M
\]
which decomposes accordingly into 
\begin{equation}
Df_{x}=L_{u}\left(x\right)\oplus L_{s}\left(x\right)\label{eq:decomp_Df}
\end{equation}
where

\[
L_{u}\left(x\right):=Df_{x}\mid_{E_{u}\left(x\right)}:E_{u}\left(x\right)\rightarrow E_{u}\left(f\left(x\right)\right)
\]

\[
L_{s}\left(x\right):=Df_{x}\mid_{E_{s}\left(x\right)}:E_{s}\left(x\right)\rightarrow E_{s}\left(f\left(x\right)\right)
\]
The map $L_{u}\left(x\right)$ is expanding while $L_{s}\left(x\right)$
is contracting. Similarly in the cotangent bundle (with the usual
convention here such as $E_{u}^{*}\left(E_{s}\right)=0$), we have

\[
^{t}L_{u}^{-1}\left(x\right):={}^{t}Df_{x}^{-1}\mid_{E_{u}^{*}\left(x\right)}:E_{u}^{*}\left(x\right)\rightarrow E_{u}^{*}\left(f\left(x\right)\right)
\]

\[
^{t}L_{s}^{-1}\left(x\right)\equiv^{t}Df_{x}^{-1}\mid_{E_{s}^{*}\left(x\right)}:E_{s}^{*}\left(x\right)\rightarrow E_{s}^{*}\left(f\left(x\right)\right)
\]

For $x\in M$, the graph of a linear map 
\[
\mathcal{L}:E_{u}\left(x\right)\rightarrow E_{s}\left(x\right)
\]
defines an element $l\in G_{x}$. In particular the linear map $\mathcal{L}\equiv0$
is associated to the subspace $E_{u}\left(x\right)\in G_{x}$. In
tensorial notations we have that 
\begin{equation}
\mathcal{L}\in\left(E_{u}^{*}\left(x\right)\otimes E_{s}\left(x\right)\right)\label{eq:L_tensor}
\end{equation}
Conversely, there is a neighborhood of $E_{u}\left(x\right)$ in $G_{x}$
such that every element $l\in G_{x}$ in this neighborhood can be
expressed as the graph of a such linear map $\mathcal{L}$. The map
$f_{G}:G\rightarrow G$ defined in (\ref{eq:def_fG}), is linear in
the fiber and has derivative

\begin{equation}
Df_{G}\mid_{\mathrm{ker}p}\equiv\left(\underbrace{^{t}L_{u}^{-1}\left(x\right)\otimes L_{s}\left(x\right)}_{Df_{G}\mid V_{l}G}\right)\label{eq:e1-1}
\end{equation}

Now let $x=f^{n}\left(x\right)$ be a periodic point. The isomorphism
(\ref{eq:isomorphism})  gives 
\begin{equation}
L\left(x\right):=L_{u}\left(x\right)\overset{\omega}{\equiv}\;^{t}L_{s}^{-1}\left(x\right)\label{eq:Def_Ell}
\end{equation}
We will note
\[
L^{\left(n\right)}\left(x\right):=\prod_{k=0}^{n-1}L\left(f^{k}\left(x\right)\right),\qquad L^{-\left(n\right)}\left(x\right):=\left(L^{\left(n\right)}\left(x\right)\right)^{-1}.
\]

We first express the ``Gutzwiller amplitude'' using (\ref{eq:decomp_Df})
and (\ref{eq:Def_Ell}) as follows:
\begin{eqnarray}
\mathcal{B}_{x}: & = & \frac{1}{\sqrt{\left|\mathrm{det}\left(1-Df_{x}^{n}\right)\right|}}=\frac{1}{\left|\mathrm{det}\left(1-L_{u}^{\left(n\right)}\left(x\right)\right)\right|^{1/2}\left|\mathrm{det}\left(1-L_{s}^{\left(n\right)}\left(x\right)\right)\right|^{1/2}}\label{eq:Bx}\\
 & = & \frac{\left|\mathrm{det}\left(L_{u}^{\left(n\right)}\left(x\right)\right)\right|^{-1/2}}{\left|\mathrm{det}\left(1-L_{u}^{-\left(n\right)}\left(x\right)\right)\right|^{1/2}\left|\mathrm{det}\left(1-L_{s}^{\left(n\right)}\left(x\right)\right)\right|^{1/2}}\nonumber \\
 & = & \left|\mathrm{det}\left(L^{-\left(n\right)}\right)\right|^{1/2}\cdot\frac{1}{\left|\mathrm{det}\left(1-L^{-\left(n\right)}\right)\right|}\nonumber 
\end{eqnarray}
Notice that the last expression is presented in such a way that the
first factor is a multiplicative cocycle and the second one converges
to $1$ as $n\rightarrow+\infty$. Let us consider the map $f_{G}:G\rightarrow G$
and the transfer operator (\ref{eq:def_F_km}) with $\left(k,m\right)=\left(0,0\right)$:
\[
\left(F_{0,0}u\right)=\left|\mathrm{det}\left(Df_{x}\left(l\right)\right)\right|^{1/2}\cdot\left|\mathrm{det}\left(Df_{G}\mid_{V_{l}G}\left(l\right)\right)\right|\cdot u\circ f_{G}^{-1}
\]
The Atiyah-Bott trace formula reads
\begin{equation}
\mathrm{Tr}^{\flat}\left(F_{0,0}^{n}\right)=\sum_{x=f^{n}\left(x\right)}\mathcal{A}_{x}\label{eq:f1}
\end{equation}
with the amplitude
\[
\mathcal{A}_{x}:=\left(\frac{\left|\mathrm{det}\left(Df_{x}^{n}\left(E_{u}\left(x\right)\right)\right)\right|^{1/2}}{\left|\mathrm{det}\left(Df_{G}^{n}\left(E_{u}\left(x\right)\right)\mid_{\mathrm{ker}p}\right)\right|}\right)\frac{1}{\left|\mathrm{det}\left(1-\left(D\tilde{f}_{G}^{-n}\right)\left(E_{u}\left(x\right)\right)\right)\right|}
\]

From (\ref{eq:e1-1}) and and (\ref{eq:Def_Ell}), we can express
$\mathcal{A}_{x}$ as
\begin{eqnarray*}
\mathcal{A}_{x} & = & \left(\left|\mathrm{det}\left(1-L_{u}^{-\left(n\right)}\right)\right|\left|\mathrm{det}\left(1-L_{s}^{-\left(n\right)}\right)\right|\left|\mathrm{det}\left(1-^{t}L_{u}^{\left(n\right)}\otimes L_{s}^{-\left(n\right)}\right)\right|\right)^{-1}\left(\frac{\left(\mathrm{det}L_{u}^{\left(n\right)}\right)^{1/2}}{\left|\mathrm{det}\left(^{t}L_{u}^{-\left(n\right)}\otimes L_{s}^{\left(n\right)}\right)\right|}\right)\\
 & = & \left(\left|\mathrm{det}\left(1-L^{-\left(n\right)}\right)\right|\left|\mathrm{det}\left(1-L^{\left(n\right)}\right)\right|\left|\mathrm{det}\left(1-L^{\left(n\right)}\otimes L^{\left(n\right)}\right)\right|\right)^{-1}\left(\frac{\left(\mathrm{det}L^{\left(n\right)}\right)^{1/2}}{\left|\mathrm{det}\left(L^{-\left(n\right)}\otimes L^{-\left(n\right)}\right)\right|}\right)\\
 & = & \left|\mathrm{det}\left(L^{-\left(n\right)}\right)\right|^{1/2}\left|\mathrm{det}\left(1-L^{-\left(n\right)}\right)\right|^{-2}\left|\mathrm{det}\left(1-L^{-\left(n\right)}\otimes L^{-\left(n\right)}\right)\right|^{-1}
\end{eqnarray*}
Comparing the expression of $\mathcal{A}_{x}$ and $\mathcal{B}_{x}$
we see that they have the same leading behavior for $n\rightarrow+\infty$.
More precisely:
\[
\mathcal{B}_{x}=\mathcal{A}_{x}\left|\mathrm{det}\left(1-L^{-\left(n\right)}\right)\right|\left|\mathrm{det}\left(1-L^{-\left(n\right)}\otimes L^{-\left(n\right)}\right)\right|
\]
Recall that we aim to express $\mathcal{B}_{x}$ as a sum of ''amplitudes''
related to some transfer operators. By construction, the amplitude
$\mathcal{A}_{x}$ already comes from a transfer operator. Therefore
we would like to express the remaining factor $\left|\mathrm{det}\left(1-L^{-\left(n\right)}\right)\right|\left|\mathrm{det}\left(1-L^{-\left(n\right)}\otimes L^{-\left(n\right)}\right)\right|$
as a sum of ``cocyles'' related to some potentials as in (\ref{eq:Atiyah-Bott}).
For this purpose we use some relations of linear algebra \cite[p.396]{reed-simon4}:
if $M:\mathbb{R}^{m}\rightarrow\mathbb{R}^{m}$ is a linear endomorphism
and $\Lambda^{k}\left(M\right)$ denotes its natural action on the
antisymmetric tensor algebra $\Lambda^{k}\left(\mathbb{R}^{m}\right)$
(with $k\in\left\{ 0,\ldots,m\right\} $) then
\[
\mathrm{det}\left(1+M\right)=\sum_{k=0}^{m}\mathrm{Tr}\left(\Lambda^{k}\left(M\right)\right),\qquad\Lambda^{k}\left(-M\right)=\left(-1\right)^{k}\Lambda^{k}\left(M\right),\quad\Lambda^{k}\left(M^{n}\right)=\left(\Lambda^{k}\left(M\right)\right)^{n}.
\]
Also for two linear endomorphisms $M_{1},M_{2}$ one has $\mathrm{Tr}\left(M_{1}\otimes M_{2}\right)=\mathrm{Tr}\left(M_{1}\right)\mathrm{Tr}\left(M_{2}\right)$
and $M_{1}^{n}\otimes M_{2}^{n}=\left(M_{1}\otimes M_{2}\right)^{n}$.
Using this, we get:
\begin{eqnarray*}
\left|\mathrm{det}\left(1-L^{-\left(n\right)}\right)\right|\left|\mathrm{det}\left(1-L^{-\left(n\right)}\otimes L^{-\left(n\right)}\right)\right|=\left(\sum_{k=0}^{d}\left(-1\right)^{k}\mathrm{Tr}\left(\Lambda^{k}\left(L^{-\left(n\right)}\right)\right)\right)\\
\left(\sum_{m=0}^{d^{2}}\left(-1\right)^{m}\mathrm{Tr}\left(\Lambda^{m}\left(L^{-\left(n\right)}\otimes L^{-\left(n\right)}\right)\right)\right)\\
=\sum_{k=0}^{d}\sum_{m=0}^{d^{2}}\left(-1\right)^{k+m}\mathrm{Tr}\left(\Lambda^{k}\left(L^{-\left(n\right)}\right)\otimes\Lambda^{m}\left(L^{-\left(n\right)}\otimes L^{-\left(n\right)}\right)\right)
\end{eqnarray*}

And therefore
\begin{eqnarray*}
\sum_{x=f^{n}\left(x\right)}\mathcal{B}_{x} & = & \sum_{k=0}^{d}\sum_{m=0}^{d^{2}}\left(-1\right)^{k+m}\left(\sum_{x=f^{n}\left(x\right)}\mathcal{A}_{x}\mathrm{Tr}\left(\Lambda^{k}\left(L^{-\left(n\right)}\right)\otimes\Lambda^{m}\left(L^{-\left(n\right)}\otimes L^{-\left(n\right)}\right)\right)\right)
\end{eqnarray*}
On the other hand, in the same way we derived (\ref{eq:f1}), we check
that the Atiyah-Bott trace formula of (\ref{eq:def_F_km}) reads:
\[
\mathrm{Tr}^{\flat}\left(F_{k,m}^{n}\right)=\sum_{x=f^{n}\left(x\right)}\mathcal{A}_{x}\mathrm{Tr}\left(\Lambda^{k}\left(L^{-\left(n\right)}\right)\otimes\Lambda^{m}\left(L^{-\left(n\right)}\otimes L^{-\left(n\right)}\right)\right)
\]
We have obtained (\ref{eq:Gutz_f1}) and finished the proof of Lemma
\ref{lem:Gutz_1}.
\end{proof}
The next corollary extends Lemma \ref{lem:Gutz_1} to the prequantum
transfer operators with an arbitrary potential function $\tilde{V}$
restricted to the $N$-th Fourier mode. But notice that, in the statement
below, we define the Attiyah-Bott trace of the prequantum transfer
operators (restricted to the $N$-th Fourier mode) by using their
expressions in local charts, rather than applying Definition \ref{Def:AB_trace}
naively. (See Remark \ref{Rem:ABTrace_prequantum_case} below.) 
\begin{cor}
\label{cor:gutz}Let $\tilde{V}\in C^{\infty}\left(G\right)$. Let
$\Lambda^{\left(k,m\right)}\rightarrow P_{G}$ be the bundle $\Lambda^{\left(k,m\right)}\rightarrow G$
pulled back by $\pi_{G}:P_{G}\rightarrow G$. Let $\widetilde{F}_{\tilde{V},k,m}:C^{\infty}\left(P_{G},\Lambda^{\left(k,m\right)}\right)\rightarrow C^{\infty}\left(P_{G},\Lambda^{\left(k,m\right)}\right)$
be the (vector-valued) prequantum transfer operator defined by 
\begin{equation}
\widetilde{F}_{\tilde{V},k,m}u(p)=\chi\circ p(f_{G}^{-1}(l)))\cdot\frac{e^{\tilde{V}(l)}}{\left|\mathrm{det}\left(D\tilde{f}_{G}\mid_{V_{l}G}\left(l\right)\right)\right|}\left(\Lambda^{k}\left(Df_{/l}^{-1}\right)\otimes\Lambda^{m}\left(D\tilde{f}\mid_{V_{l}G}\right)\right)u\left(\tilde{f}_{G}^{-1}\left(p\right)\right)\label{eq:def_F_km-1}
\end{equation}
where we set $l=\pi_{G}(p)$. Let 
\begin{equation}
\left(\widetilde{F}_{\tilde{V},k,m}\right)_{N}:C_{N}^{\infty}\left(P_{G},\Lambda^{\left(k,m\right)}\right)\rightarrow C_{N}^{\infty}\left(P_{G},\Lambda^{\left(k,m\right)}\right)\label{eq:def_F_Nkm}
\end{equation}
be its restriction to $N$-equivariant functions. Then 
\begin{equation}
\sum_{x=f^{n}\left(x\right)}\frac{e^{\left(V-V_{0}\right)_{n}\left(x\right)}e^{iS_{n,x}/\hbar}}{\sqrt{\left|\mathrm{det}\left(1-Df_{x}^{n}\right)\right|}}=\sum_{k=0}^{d}\sum_{m=0}^{d^{2}}\left(-1\right)^{k+m}\mathrm{Tr}^{\flat}\left(\widetilde{F}_{\tilde{V},k,m}\right)_{N}^{n}\label{eq:r1}
\end{equation}
where $V_{0}\left(x\right)=\frac{1}{2}\log\left(\mathrm{det}Df_{x}\mid_{E_{u}\left(x\right)}\right)$
is the potential of reference.\end{cor}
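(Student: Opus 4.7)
My plan is to deduce \eqref{eq:r1} from the algebraic identity already established in the proof of Lemma \ref{lem:Gutz_1}, upgrading that identity by (i) inserting the potential $\tilde V$ in place of $\tfrac12\log|\det Df|_l|$, and (ii) passing from the Grassmannian $G$ to the prequantum extension $P_G$ restricted to the $N$-th Fourier mode. I would first apply Atiyah–Bott (Lemma \ref{lem:Atiyah-Bott_flat_trace}), suitably extended to operators on $N$-equivariant sections of $\Lambda^{(k,m)}\to P_G$, to each operator $(\widetilde F_{\tilde V,k,m})_N^n$ individually. The trace $\mathrm{Tr}^\flat$ in this equivariant setting is best defined in local trivializations $\tau_\alpha$ of $P_G\to G$, in which $(\widetilde F_{\tilde V,k,m})_N$ becomes a genuine transfer operator on $\Lambda^{(k,m)}$ over $G$ multiplied by the oscillating phase $e^{-i2\pi N\mathcal A_{\beta,\alpha}\circ f_G^{-1}}$ as in \eqref{eq:expression_op_locale}. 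Atiyah–Bott then applies on the base, and the local phases accumulated along a closed orbit reassemble into the global action $e^{i2\pi N S_{n,x}}=e^{iS_{n,x}/\hbar}$ by Remark \ref{Rem:Action_of_PerOrb}.

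Next I would identify the contribution at each periodic point. Because of the cut-off $\chi$ supported near $E_u$ and the relation $E_u=\bigcap_{n\ge1}f_G^n(K_0)$, the only surviving periodic points of $\tilde f_G$ lie over the attractor $\pi_G^{-1}(E_u)$, and these are in canonical bijection with the fixed points $x=f^n(x)$. At such an orbit $l=E_u(x)$, the bundle endomorphism associated to $(\widetilde F_{\tilde V,k,m})^n$ differs from the corresponding endomorphism for $F_{k,m}^n$ of Lemma \ref{lem:Gutz_1} only in the potential factor, which becomes $e^{\tilde V\circ E_u(x)\cdot n\text{-iterate}}/|\det Df_x|_{E_u}|^{n/2}=e^{(V-V_0)_n(x)}$ since $V=\tilde V\circ E_u$ and $V_0=\tfrac12\log|\det Df|_{E_u}|$. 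All other ingredients (the denominator $|\det(Df_G^n|_{V_lG})|^{-1}$, the trace of $\Lambda^k(L^{-(n)})\otimes\Lambda^m(L^{-(n)}\otimes L^{-(n)})$ with $L=L_u$, and the Lefschetz denominator $|\det(1-Df_G^n|_l)|^{-1}$) are exactly those that appeared in the proof of Lemma \ref{lem:Gutz_1}. Combining with the equivariant phase I therefore obtain
\begin{equation*}
\mathrm{Tr}^\flat\left(\widetilde F_{\tilde V,k,m}\right)_N^n=\sum_{x=f^n(x)} e^{(V-V_0)_n(x)}\,e^{iS_{n,x}/\hbar}\,\mathcal A_x\,\mathrm{Tr}\!\left(\Lambda^k(L^{-(n)})\otimes\Lambda^m(L^{-(n)}\otimes L^{-(n)})\right)
\end{equation*}
with the same amplitude $\mathcal A_x$ as in \eqref{eq:f1}.

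Finally, summing over $0\le k\le d$ and $0\le m\le d^2$ with signs $(-1)^{k+m}$, the linear-algebra identities $\det(1+M)=\sum_k(-1)^k\mathrm{Tr}(\Lambda^k(-M))$ applied first to $M=L^{-(n)}$ on $\real^d$ and then to $M=L^{-(n)}\otimes L^{-(n)}$ on $\real^{d^2}$ collapse the sum into $\mathcal A_x\cdot|\det(1-L^{-(n)})|\cdot|\det(1-L^{-(n)}\otimes L^{-(n)})|$, which by the computation \eqref{eq:Bx} in the proof of Lemma \ref{lem:Gutz_1} equals $\mathcal B_x=|\det(1-Df_x^n)|^{-1/2}$. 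This gives \eqref{eq:r1}. The main obstacle is purely bookkeeping: one must check carefully that the equivariant Atiyah–Bott trace of $(\widetilde F_{\tilde V,k,m})_N^n$ indeed produces the phase $e^{iS_{n,x}/\hbar}$ precisely once per periodic orbit and that all the vertical Jacobian factors $|\det Df_G|_{VG}|$ combine as in Lemma \ref{lem:Gutz_1}; the algebraic identity of that lemma then does the rest.
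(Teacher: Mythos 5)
Your proposal follows essentially the same route as the paper, which notes simply that one can ``go through the proof of Lemma \ref{lem:Gutz_1} by considering the expressions of the transfer operators on local charts'' (as set out in Remark \ref{Rem:ABTrace_prequantum_case}), with the action phase $e^{iS_{n,x}/\hbar}$ arising from the fiber rotation of $\tilde f_G^n$ along a closed orbit (Remark \ref{Rem:Action_of_PerOrb}). You make explicit the two modifications to Lemma \ref{lem:Gutz_1} — replacing $|\det(Df_x|_l)|^{1/2}$ by $e^{\tilde V(l)}$, which at the attracting fixed point $l=E_u(x)$ over a periodic orbit yields the extra cocycle $e^{(V-V_0)_n(x)}$, and accumulating the local phases $e^{-i2\pi N\mathcal A_{\beta,\alpha}\circ f_G^{-1}}$ into $e^{i2\pi N S_{n,x}}$ — and these are precisely the points the paper emphasizes; the rest is the algebraic identity already carried out in Lemma \ref{lem:Gutz_1}. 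The argument is correct.
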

\begin{rem}
\label{Rem:ABTrace_prequantum_case} As we noted above, we define
the Attiyah-Bott trace of prequantum transfer operators by using its
local expression. (If we adopt the ``line bundle termonology'' refered
in Remark \ref{Rem:line_bundle_terminology}, this definition coincides
with the Attiyah-Bott trace in Definition \ref{Def:AB_trace} with
$E=L^{\otimes N}$.) For simplicity, let us consider the case $\widetilde{F}_{\widetilde{V},0,0}$
which acts on functions on $P_{G}$ and let $\left(\widetilde{F}_{\widetilde{V},0,0}\right)_{N}$
$ $be its restriction to the $N$-th Fourier mode. As we observed
in Proposition \ref{prop:bfF} in Section \ref{sec:2}, this operator
$\left(\widetilde{F}_{\widetilde{V},0,0}\right)_{N}$ is lifted to
a matrix of operators, denoted by $\mathbf{\widetilde{F}}_{\hbar}$.
Each component $\widetilde{\mathbf{F}}_{j,i}$ of $\widetilde{\mathbf{F}}_{\hbar}$
are simple transfer operators on Euclidean trace and the Attiyah-Bott
trace is defined by Definition \ref{Def:AB_trace}. We define the
Attiyah-Bott trace of $\left(\widetilde{F}_{\widetilde{V},0,0}\right)_{N}$
as
\[
\mathrm{Tr}^{\flat}\left(\widetilde{F}_{\widetilde{V},0,0}\right)_{N}:=\sum_{i=1}^{I_{\hbar}}\mathrm{Tr}^{\flat}\left(\widetilde{\mathbf{F}}_{i,i}\right)
\]
taking the sum of the flat traces of the diagonal elements. This definition
actually does not depend on the choice of local trivializations. Also
the extension to the case of vecor-valued transfer operators such
as $\left(\widetilde{F}_{\tilde{V},k,m}\right)_{N}$ is straightforward. \end{rem}
\begin{proof}[Proof of Corollary \ref{cor:gutz}]
Observe that we can go through the proof of Lemma \ref{lem:Gutz_1}
by considering the expressions of the transfer operators on local
charts, as we noted in the remark above. Then, recalling Remark \ref{Rem:Action_of_PerOrb},
we see that the action $e^{iS_{n,x}/\hbar}$ in the left hand side
of (\ref{eq:r1}) appears as a consequence of the rotation on the
fiber that the prequantum map $\tilde{f}_{G}^{n}$ induces. 
\end{proof}

\subsection{Restriction to the external band}

We will now relate the previous quite formal ``flat trace formula''
(\ref{eq:r1}) with the spectrum of the transfer operators. This relation
is given by the following result obtained in \cite{Baladi-Tsujii08}. 

\begin{center}{\color{blue}\fbox{\color{black}\parbox{16cm}{
\begin{thm}
\textbf{\label{thm:Flat-trace-and_spectrum}``Flat trace and spectrum''}.
Let $f:M\to M$ be a smooth Anosov diffeomorphism. Suppose that the
transfer operator $\hat{F}$ is defined from the general setting (\ref{eq:def_general_transfer_op_bundle}).
Let $\left(\lambda_{j}\right)_{j\in\mathbb{N}}$ denote its Ruelle
discrete spectrum. Then, for any $\varepsilon>0$, there exists $C_{\varepsilon}>0$
such that for any $n>0$,
\begin{equation}
\left|\mathrm{Tr}^{\flat}\hat{F}^{n}-\sum_{j,\left|\lambda_{j}\right|\geq\varepsilon}\lambda_{j}^{n}\right|\leq C_{\varepsilon}\varepsilon^{n}.\label{eq:trace_flat_and_spectrum}
\end{equation}

\end{thm}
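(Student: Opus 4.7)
The plan is to exploit the spectral decomposition of $\hat{F}$ on an anisotropic Sobolev space, and then show that the flat trace of the remainder is exponentially small, even though the flat trace is in general a more singular object than the operator trace.

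First, using the discrete spectrum theorem (Theorem \ref{thm:Discrete-spectrum} for the $\hat{F}_N$ case, and the analogous statement established in \cite{Baladi-Tsujii08} in the general setting), I would pick the regularity parameter $r > 0$ of the anisotropic Sobolev space $\mathcal{H}^{r}$ large enough so that the essential spectral radius of $\hat{F}:\mathcal{H}^{r}\to\mathcal{H}^{r}$ is strictly smaller than $\varepsilon/2$. Since the Ruelle resonances $\{\lambda_j\}$ with $|\lambda_j|\ge \varepsilon$ are intrinsic to the Anosov map (they do not depend on the choice of $r$), only finitely many of them lie in the region $|z|\ge\varepsilon$. Using a Cauchy-type contour integral along a circle of radius $\varepsilon$ in the resolvent set, one obtains a spectral decomposition $\hat{F}=\hat{F}_{\ge}+\hat{F}_{<}$ into a finite rank piece $\hat{F}_{\ge}$ carrying precisely the eigenvalues $|\lambda_j|\ge\varepsilon$ (with their Jordan structure), and a remainder $\hat{F}_{<}$ with spectral radius $<\varepsilon$. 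The same contour argument, applied to iterates, yields $\|\hat{F}_{<}^{n}\|_{\mathcal{H}^{r}}\le C_{\varepsilon}\,\varepsilon^{n}$ for all $n$.

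Second, I would check that the flat trace is additive on this decomposition, i.e.\ $\mathrm{Tr}^{\flat}(\hat{F}^{n})=\mathrm{Tr}^{\flat}(\hat{F}_{\ge}^{n})+\mathrm{Tr}^{\flat}(\hat{F}_{<}^{n})$. For the finite rank part $\hat{F}_{\ge}^{n}$, the Schwartz kernel is smooth, so the flat trace agrees with the operator trace, giving $\mathrm{Tr}^{\flat}(\hat{F}_{\ge}^{n})=\mathrm{Tr}(\hat{F}_{\ge}^{n})=\sum_{|\lambda_j|\ge\varepsilon}\lambda_j^{n}$, counted with (algebraic) multiplicity.

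The main obstacle, and the content of the theorem, is the estimate $|\mathrm{Tr}^{\flat}(\hat{F}_{<}^{n})|\le C_{\varepsilon}\,\varepsilon^{n}$. The difficulty is that the flat trace is not continuous with respect to the operator norm on $\mathcal{H}^{r}$: it is defined by restricting a distributional kernel to the diagonal, and in particular makes sense only because the wavefront set of the kernel of $\hat{F}^{n}$ does not meet the conormal of the diagonal (a transversality property that follows from hyperbolicity of the periodic points, as used in Lemma \ref{lem:Atiyah-Bott_flat_trace}). To convert the operator-norm bound $\|\hat{F}_{<}^{n}\|_{\mathcal{H}^{r}}\le C_{\varepsilon}\,\varepsilon^{n}$ into a flat-trace bound, one writes the flat trace as a sum of local contributions using the partition of unity $\{\psi_i\}$ and local charts $\{\kappa_i\}$ from Proposition \ref{prop:Local-charts}, representing the kernel of $\hat{F}_{<}^{n}$ in the Bargmann/FBI picture on each chart. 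On each chart the kernel has the wavefront set structure of a Fourier integral operator with canonical relation $(DF^{n})^{-1}$, and the weight $\mathcal{W}_{\hbar}^{r}$ controls its decay transverse to the trapped set $K$. One then splits the phase-space integration into the piece near the trapped set (where the Atiyah--Bott transversality ensures integrability and produces a contribution bounded by the operator norm times a fixed combinatorial factor) and the piece far from $K$ (where the weight decay wins by a further factor $\hbar^{N}$ for arbitrary $N$). Combining these estimates yields $|\mathrm{Tr}^{\flat}(\hat{F}_{<}^{n})|\le C_{\varepsilon}\,\varepsilon^{n}$, and adding the finite rank contribution gives \eqref{eq:trace_flat_and_spectrum}.

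I expect the hardest step to be this last one: promoting the operator-norm control on $\hat{F}_{<}^{n}$ to a flat-trace bound. In the form given, this is precisely the content of the trace formula proven in \cite{Baladi-Tsujii08}; the streamlined route is to invoke that reference rather than to redo the microlocal kernel analysis here, since all the needed ingredients (discrete spectrum, escape function, anisotropic Sobolev space, and hyperbolicity of periodic orbits) have already been set up in Sections \ref{sec:Resonances-of-linear_exp_3}--\ref{sec:Proofs-of-the_mains} of the paper.
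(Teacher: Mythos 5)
The paper does not actually prove this theorem: the text introducing it says explicitly that it is ``obtained in \cite{Baladi-Tsujii08},'' and the proof is delegated to that reference. Your bottom line --- to invoke \cite{Baladi-Tsujii08} --- therefore matches the paper. But two points in your outline deserve comment.

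First, the claim that the Schwartz kernel of the finite-rank part $\hat{F}_{\ge}^{n}$ is smooth is incorrect. The spectral projector $\Pi_{\ge}$ has the form $\sum_j |\psi_j\rangle\langle\phi_j|$ where the $\psi_j$ are eigendistributions living in the anisotropic space $\mathcal{H}^{r}$ and the $\phi_j$ are co-eigendistributions in its dual; these are genuine distributions (with wavefront sets roughly along $E_s^*$ and $E_u^*$ respectively), not smooth functions. The equality $\mathrm{Tr}^{\flat}(\hat{F}_{\ge}^{n})=\sum_{|\lambda_j|\ge\varepsilon}\lambda_j^{n}$ does hold, but the correct justification is a wavefront-set transversality argument (the same transversality that makes the Atiyah--Bott flat trace well defined), not smoothness of the kernel.

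Second, and more substantively, your route to the bound $|\mathrm{Tr}^{\flat}(\hat{F}_{<}^{n})|\le C_{\varepsilon}\varepsilon^{n}$ is not the mechanism used in \cite{Baladi-Tsujii08} (or in the paper's own later Lemma~\ref{lem:11.8}, where the same machinery is deployed in the $N$-dependent setting). You try to estimate the flat trace of the spectral remainder directly by microlocal kernel analysis, but the spectral projector $\Pi_{\ge}$ is defined only through a contour integral of the resolvent on $\mathcal{H}^r$, so $\hat{F}_{<}^{n}$ is not handed to you as an operator with an explicit kernel on which to do microlocal analysis, and it is not clear that the flat trace of $\hat{F}_{<}^{n}$ is controlled by its $\mathcal{H}^{r}$ operator norm. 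The decomposition that actually does the work in Baladi--Tsujii is not spectral but happens in phase space: one writes the (lift of) the transfer operator as $F_{\mathrm{trace-free}}+F_{\mathrm{trace}}$ by cutting off the Schwartz kernel near and away from the diagonal in a weight-adapted sense, so that $\mathrm{Tr}^{\flat}(F_{\mathrm{trace-free}}^{n})=0$ \emph{exactly} (its kernel vanishes on the diagonal) while $F_{\mathrm{trace}}$ is honestly trace class. Expanding $\hat{F}^{n}$ in terms of these two pieces, the flat trace is then controlled by operator norms (of $F_{\mathrm{trace-free}}^{k}$) and trace norms (of $F_{\mathrm{trace}}$), which the escape function estimates provide. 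This trace-free/trace-class decomposition is the genuine content behind the theorem, and it is qualitatively different from the spectral split $\hat{F}=\hat{F}_{\ge}+\hat{F}_{<}$ you propose: one is a cutoff in phase space, the other a cutoff in the complex plane, and only the former interacts well with the flat trace.
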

}}}\end{center}

In our case, we have a family of operators $\left(\hat{F}_{\tilde{V},k,m}\right)_{N}$
depending on the semiclassical parameter $N=1/\left(2\pi\hbar\right)$
and we want to get a result similar to (\ref{eq:trace_flat_and_spectrum})
but with a control of the remainder uniformly with respect to $N$.
This is the purpose of the next lemma which concerns the operator
$\left(\hat{F}_{\tilde{V},k,m}\right)_{N}$ defined in (\ref{eq:def_F_km-1})
but with the particular value $\left(k,m\right)=\left(0,0\right)$.
It coincides with the transfer operator $\widetilde{F}_{N}$ defined
in (\ref{eq:def_F_tilde_N}):
\[
\left(\hat{F}_{\tilde{V},0,0}\right)_{N}=\widetilde{F}_{N}
\]
 Recall that the quantum operator $\widetilde{\mathcal{F}}_{\hbar}:\mathcal{H}_{\hbar}\rightarrow\mathcal{H}_{\hbar}$
has been defined from $\widetilde{F}_{N}$ in (\ref{eq:def_quantum_op})
as its spectral restriction to the external band. $\widetilde{\mathcal{F}}_{\hbar}$
is finite rank so its trace $\mathrm{Tr}\left(\widetilde{\mathcal{F}}_{\hbar}^{n}\right)$
below is well defined.
\begin{lem}
\label{lem:trace_flat_and_trace}For any $\varepsilon>0$, there exists
$C_{\varepsilon}$ and $N_{\varepsilon}$ such that, for any $N>N_{\varepsilon}$
and any $n>0$, we have 
\begin{equation}
\left|\mathrm{Tr}^{\flat}\left(\widetilde{F}_{N}^{n}\right)-\mathrm{Tr}\left(\widetilde{\mathcal{F}}_{\hbar}^{n}\right)\right|\leq C_{\varepsilon}N^{d}\left(r_{1}^{+}+\varepsilon\right)^{n}\label{eq:r3}
\end{equation}
\end{lem}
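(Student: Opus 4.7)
Plan: Since $\Pi_{\hbar}$ is an exact spectral projector and therefore commutes with $\widetilde{F}_{N}$, one has the clean orthogonal (in the spectral sense) decomposition
\[
\widetilde{F}_{N}^{n} \;=\; \widehat{\mathcal{F}}_{\hbar}^{\,n} \,+\, R_{n}, \qquad R_{n} := (1-\Pi_{\hbar})\,\widetilde{F}_{N}^{\,n}\,(1-\Pi_{\hbar}),
\]
and by linearity of the flat trace $\mathrm{Tr}^{\flat}(\widetilde{F}_{N}^{n}) = \mathrm{Tr}^{\flat}(\widehat{\mathcal{F}}_{\hbar}^{\,n}) + \mathrm{Tr}^{\flat}(R_{n})$. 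Since $\widehat{\mathcal{F}}_{\hbar}$ has finite rank by Theorem \ref{thm:weyl_law_extended}, $\mathrm{Tr}^{\flat}(\widehat{\mathcal{F}}_{\hbar}^{\,n}) = \mathrm{Tr}(\widehat{\mathcal{F}}_{\hbar}^{\,n})$, so the lemma reduces to the uniform bound
\[
|\mathrm{Tr}^{\flat}(R_{n})| \;\leq\; C_{\varepsilon}\, N^{d}\,(r_{1}^{+}+\varepsilon)^{n}.
\]

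The operator norm of $R_{n}$ already satisfies $\|R_{n}\|_{\mathcal{H}_{N}^{r}} \leq C_{\varepsilon}(r_{1}^{+}+\varepsilon)^{n}$: this follows from a Cauchy integral representation of $(1-\Pi_{\hbar})\widetilde{F}_{N}^{\,n}(1-\Pi_{\hbar})$ along a circle of radius $r_{1}^{+}+\varepsilon$, combined with the resolvent estimate \eqref{eq:bound_resolvent-1} from Theorem \ref{thm:band_structure-1}, exactly as in the derivation of \eqref{eq:bound_F1^n} in Proposition \ref{prop:bound_on_F^n}. This norm bound alone is not enough, however, since a flat trace is not controlled by an operator norm; the core of the proof is the extraction of the Weyl factor $N^{d}$.

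To extract it, invoke the refined band decomposition of Theorem \ref{hm:More-detailled-description_of_spectrum_ext} applied with an integer $n_{0}$ large enough that $r_{n_{0}+1}^{+} \leq r_{1}^{+}+\varepsilon$, and with $r$ correspondingly large so the essential spectral radius $\varepsilon_{r}$ is $\leq r_{1}^{+}+\varepsilon$. The Grassmannian analogue of the estimate \eqref{eq:diff_projectors} gives $(1-\Pi_{\hbar}) = \sum_{k=1}^{n_{0}+1}\tau^{(k)} + O(N^{-\varepsilon})$ in operator norm, so
\[
R_{n} \;=\; \sum_{k=1}^{n_{0}+1}\tau^{(k)}\widetilde{F}_{N}^{\,n}\tau^{(k)} \;+\; (\text{off-diagonal}) \;+\; O\!\bigl(N^{-\varepsilon}(r_{1}^{+}+\varepsilon)^{n}\bigr).
\]
For each diagonal piece with $1\leq k\leq n_{0}$, the rank of $\tau^{(k)}$ is bounded by $\dim\mathcal{H}'_{k} \leq CN^{d}$ so $\mathrm{Tr}^{\flat} = \mathrm{Tr}$ and the contribution is at most $\mathrm{rank}\times\|\cdot\|_{\mathrm{op}} \leq CN^{d}(r_{k}^{+}+\varepsilon)^{n} \leq CN^{d}(r_{1}^{+}+\varepsilon)^{n}$ by Theorem \ref{hm:More-detailled-description_of_spectrum_ext}(4). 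For the off-diagonal pairs $k\neq l$, part (3) of that theorem gives the factor $N^{-\varepsilon}$ per off-diagonal insertion; iterating and summing the resulting geometric series yields a total off-diagonal contribution to $\mathrm{Tr}^{\flat}$ of size $O(N^{d-\varepsilon}(r_{1}^{+}+\varepsilon)^{n})$, which is absorbed.

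The only remaining and most delicate summand is the infinite-rank residual piece $\mathrm{Tr}^{\flat}(\tau^{(n_{0}+1)}\widetilde{F}_{N}^{\,n}\tau^{(n_{0}+1)})$, for which the trivial bound $\mathrm{rank}\times\|\cdot\|$ fails. Here I would write the flat trace using the Atiyah–Bott periodic-orbit formula (Lemma \ref{lem:Atiyah-Bott_flat_trace} together with its prequantum version underlying Corollary \ref{cor:gutz}) applied to the truncated operator, bounding each periodic-point amplitude by $e^{V_{n}(x)}|\det(I-Df_{x}^{\,n})|^{-1/2}$ times the gain $(r_{n_{0}+1}^{+}+\varepsilon)^{n}/e^{V_{n}(x)}$ coming from restriction to the residual band; combined with the entropy bound $\#\{f^{n}x=x\}\leq C\lambda^{dn}$ and hyperbolicity, this contributes at most $C(r_{1}^{+}+\varepsilon)^{n}$. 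The main obstacle will be this last step: making the Baladi–Tsujii/Atiyah–Bott estimate on the residual band quantitative and uniform in $N$, which requires the kernel estimates of Section \ref{sec:Proofs-of-the_mains} (and their Grassmannian versions from Subsection \ref{sub:nonlinear_extended}) to control the integral of the Schwartz kernel of $R_{n}$ on the diagonal chart by chart.
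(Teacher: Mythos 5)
Your opening reduction is correct and matches the paper's: you write $\widetilde{F}_{N}^{n}=\widehat{\mathcal{F}}_{\hbar}^{\,n}+R_{n}$ with $R_{n}=(1-\Pi_{\hbar})\widetilde{F}_{N}^{\,n}(1-\Pi_{\hbar})$, use linearity of the flat trace, and correctly observe that the whole problem is the bound $|\mathrm{Tr}^{\flat}(R_{n})|\lesssim N^{d}(r_{1}^{+}+\varepsilon)^{n}$ and that an operator-norm estimate alone cannot give it. Up to this point you are faithful to the structure of the paper's proof (which isolates this as Lemma \ref{lem:11.8}). Where the two arguments diverge is in how to actually bound $\mathrm{Tr}^{\flat}(R_{n})$, and this is where your proposal has a genuine gap.

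Your plan is to refine the decomposition using the band projectors $\tau^{(k)}$, $1\le k\le n_{0}+1$. The contributions with $1\le k\le n_{0}$ are finite rank $\asymp N^{d}$, so for those $\mathrm{Tr}^{\flat}=\mathrm{Tr}$ and the bound by $\mathrm{rank}\times\|\cdot\|$ works, as you say. But the residual piece $\tau^{(n_{0}+1)}\widetilde{F}_{N}^{n}\tau^{(n_{0}+1)}$ is of infinite rank, and your proposal to treat it via the Atiyah--Bott trace formula does not go through: the projector $\tau^{(n_{0}+1)}$ is \emph{not} a multiplication operator, so $\tau^{(n_{0}+1)}\widetilde{F}_{N}^{n}\tau^{(n_{0}+1)}$ is not of the weighted-composition form $u\mapsto B\cdot u\circ f_{G}^{-n}$ to which Lemma \ref{lem:Atiyah-Bott_flat_trace} applies. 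Its Schwartz kernel is not a distribution concentrated on the graph of $f_{G}^{-n}$, the trace is not a sum over periodic orbits, and the claimed ``gain'' factor $(r_{n_{0}+1}^{+}+\varepsilon)^{n}/e^{V_{n}(x)}$ per periodic point has no periodic-orbit interpretation. So the most delicate summand is genuinely unaddressed, which you acknowledge but do not resolve.

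The paper handles exactly this obstacle in Lemma \ref{lem:11.8} by a different mechanism that is worth contrasting with yours: a \emph{phase-space} (not spectral) decomposition of the lifted operator $\widetilde{F}_{N}^{\mathrm{lift}}=F_{\mathrm{trace-free}}+F_{\mathrm{trace}}$. The operator $F_{\mathrm{trace-free}}$ retains only the part of the Schwartz kernel supported in the region where the escape function strictly decreases ($\mathcal{W}_{\hbar}^{r}(z')\le(\lambda/2)^{-r}\mathcal{W}_{\hbar}^{r}(z)$), which forces the kernel of $F_{\mathrm{trace-free}}^{n}$ to vanish identically on the diagonal, so $\mathrm{Tr}^{\flat}(F_{\mathrm{trace-free}}^{n})=0$ exactly (Lemma \ref{lem:The-operators-F_trace}); meanwhile $F_{\mathrm{trace}}$ is trace class with $\|F_{\mathrm{trace}}\|_{\mathrm{Tr}}\le CN^{d}$ uniformly in $N$. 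One then expands $\bigl(\widetilde{\mathcal{R}}_{\hbar}^{\mathrm{lift}}\bigr)^{n}=F_{\mathrm{trace-free}}^{n}+\sum_{k=0}^{n-1}F_{\mathrm{trace-free}}^{k}\bigl(F_{\mathrm{trace}}-\widetilde{\mathcal{F}}_{\hbar}^{\mathrm{lift}}\bigr)\bigl(\widetilde{\mathcal{R}}_{\hbar}^{\mathrm{lift}}\bigr)^{n-k-1}$, drops the first term by the trace-free property, and estimates each remaining summand by $\|F_{\mathrm{trace-free}}^{k}\|\cdot\|F_{\mathrm{trace}}-\widetilde{\mathcal{F}}_{\hbar}^{\mathrm{lift}}\|_{\mathrm{Tr}}\cdot\|\widetilde{\mathcal{R}}_{\hbar}\|^{n-k-1}\lesssim(r_{1}^{+}+\varepsilon)^{k}\cdot N^{d}\cdot(r_{1}^{+}+\varepsilon)^{n-k-1}$. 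This sidesteps entirely the need to apply a periodic-orbit formula to a non-multiplicative truncation. You should replace the residual-band step with this trace-free/trace-class kernel argument; your treatment of the finite-rank bands is correct but, given the paper's more direct approach, unnecessary.
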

\begin{proof}
We adapt the methods presented in \cite{Baladi-Tsujii08} to our settings.
Let $\Pi_{0}:\mathcal{H}_{N}^{r}\left(P_{G}\right)\rightarrow\mathcal{H}_{\hbar}$
denotes the finite rank spectral projector on the external band of
$\widetilde{F}_{N}$ defined for $N$ large enough. Note the it commutes
with $\widetilde{F}_{N}$ by definition: $\left[\Pi_{0},\widetilde{F}_{N}\right]=0$.
Let 
\[
\widetilde{\mathcal{F}}_{\hbar}:=\Pi_{0}\widetilde{F}_{N}\Pi_{0}\quad\mbox{and}\quad\widetilde{\mathcal{R}}_{\hbar}:=\left(1-\Pi_{0}\right)\widetilde{F}_{N}\left(1-\Pi_{0}\right).
\]
Then, for any $n\geq1$, we have
\begin{equation}
\widetilde{F}_{N}^{n}=\widetilde{\mathcal{F}}_{\hbar}^{n}+\widetilde{\mathcal{R}}_{\hbar}^{n}.\label{eq:decomp_spectral}
\end{equation}
and hence
\begin{equation}
\mathrm{Tr}^{\flat}\left(\widetilde{F}_{N}^{n}\right)=\mathrm{Tr}\left(\widetilde{\mathcal{F}}_{\hbar}^{n}\right)+\mathrm{Tr}^{\flat}\left(\widetilde{\mathcal{R}}_{\hbar}^{n}\right).\label{eq:Trace_decomposed}
\end{equation}

We first recall some estimates related to the decomposition (\ref{eq:decomp_spectral}).
From (\ref{eq:weyl_law}) we have 
\begin{equation}
\left\Vert \widetilde{\mathcal{F}}_{\hbar}\right\Vert _{\mathrm{Tr}}\leq CN^{d}\label{eq:Tr_Fh}
\end{equation}
with $C$ independent on $N$. From (\ref{eq:bound_F1^n}), we have,
for every $N>N_{\varepsilon}$ with some $N_{\varepsilon}$ large
enough: 
\begin{equation}
\left\Vert \widetilde{\mathcal{R}}_{\hbar}^{n}\right\Vert \leq C_{\varepsilon}\left(r_{1}^{+}+\varepsilon\right)^{n}\label{eq:norm_F1}
\end{equation}
with $C_{\varepsilon}$ independent on $N$ and $n$. The following
Lemma is central in the argument, whose proof is postpone for a moment. 
\begin{lem}
\label{lem:11.8}There exists $C_{\epsilon}>0$ and $N_{\epsilon}>0$such
that for any $N>N_{\epsilon}$ and any $n>0$
\begin{equation}
\left|\mathrm{Tr}^{\flat}\left(\widetilde{\mathcal{R}}_{\hbar}^{n}\right)\right|\leq C_{\epsilon}\cdot N^{d}\left(r_{1}^{+}+\varepsilon\right)^{n}\label{eq:bound_of_TrF1n}
\end{equation}

\end{lem}
From (\ref{eq:bound_of_TrF1n}) and (\ref{eq:Trace_decomposed}),
we obtain (\ref{eq:r3}), finishing the proof of Lemma \ref{lem:trace_flat_and_trace}.
\end{proof}
~
\begin{proof}[Proof of Lemma \ref{lem:11.8}]
 The proof is obtained following the strategy presented in \cite{Baladi-Tsujii08}.
Here we explain how the uniform estimate on the remainder term with
respect to the semi-classical parameter $N$ (or $\hbar$) is obtained.
In the argument below, we discuss about transfer operators on local
charts and local trivializations. (Remind that we defined the Attiyah-Bott
trace for prequantum transfer operators using local charts. See Remark
\ref{Rem:ABTrace_prequantum_case}.) But, for simplicity, we still
write the operators on local charts by $\widetilde{F}_{N}$ abusively.
(Moreover we will confuse the objects on local coordinates and the
coresponding globel objects.) It is not difficult to put the following
schematic argument into rigorous one. We refer the paper \cite{Baladi-Tsujii08}
for the detail. 

We consider the lifted operator on the phase space, $\widetilde{F}_{N}^{\mathrm{lift}}:=\mathcal{B}_{(x,s)}\circ\widetilde{F}_{N}\circ\mathcal{B}_{(x,s)}^{*}$,
and decompose it into two parts: 
\begin{equation}
\widetilde{F}_{N}^{\mathrm{lift}}=F_{\mathrm{trace-free}}+F_{\mathrm{trace}}\label{eq:decomp_trace_free}
\end{equation}
so that we get properties given in Lemma \ref{lem:The-operators-F_trace}
below. Note that the decomposition (\ref{eq:decomp_trace_free}) is
not a spectral decomposition, like (\ref{eq:decomp_spectral}), but
is obtained from a ``phase space decomposition''. To define it,
recall the weight (or escape) function $\mathcal{W}_{\hbar}^{r}\left(x,s,\xi_{x},\xi_{s}\right)$
defined in (\ref{eq:def_W_W+}). For simplicity we write $z=\left(x,s,\xi_{x},\xi_{s}\right)\in T^{*}M$.
Let $K_{F}\left(z',z\right)$ be the Schwartz kernel of the lifted
operator on the phase space $\widetilde{F}_{N}^{\mathrm{lift}}=\mathcal{B}_{(x,s)}\circ\widetilde{F}_{N}\circ\mathcal{B}_{(x,s)}^{*}$,
so that 
\[
\left(\widetilde{F}_{N}^{\mathrm{lift}}u\right)\left(z'\right)=\int K_{F}\left(z',z\right)u\left(z\right)dz.
\]
The main property of the escape function $\mathcal{W}_{\hbar}^{r}\left(z\right)$
(which we have already made use of ) is that there exists a compact
neighborhood 
\[
U=\left\{ z=\left(x,\xi\right)\in T^{*}M,\quad\left|\zeta\right|\leq C\sqrt{\hbar}\right\} 
\]
of the ``trapped set'' $\widetilde{K}=\{z\in T^{*}G\mid\zeta=0,s=0\}$
such that 
\[
\left(z',z\right)\notin\left(U\times U\right)\Rightarrow\left|\frac{\mathcal{W}_{\hbar}^{r}\left(z'\right)}{\mathcal{W}_{\hbar}^{r}\left(z\right)}K_{F}\left(z',z\right)\right|\leq\lambda^{-r}
\]
where $\zeta$ is the coordinate introduced in (\ref{eq:coordinates_extended})
(and Proposition \ref{prop:Normal-coordinates.}).

Let $X\subset T^{*}M\times T^{*}M$ be the subset defined by
\begin{equation}
X:=\left\{ \left(z',z\right)\in T^{*}M\times T^{*}M,\quad\mathcal{W}_{\hbar}^{r}\left(z'\right)\leq(\lambda/2)^{-r}\cdot\mathcal{W}_{\hbar}^{r}\left(z\right)\right\} \label{eq:def_X}
\end{equation}
and let $\mathbb{\mathbf{1}}_{X}\left(z',z\right)$ be the characteristic
function of the set $X$. We define the operator $F_{\mathrm{trace-free}}$
by its integral expression:
\[
\left(F_{\mathrm{trace-free}}u\right)\left(z'\right):=\int\mathbb{\mathbf{1}}_{X}\left(z',z\right)K_{F}\left(z',z\right)u\left(z\right)dz
\]
In other words, we define $F_{\mathrm{trace-free}}$ and $F_{\mathrm{trace}}$
as operators with the Schwartz kernel respectively $\mathbb{\mathbf{1}}_{X}K_{F}$
and $\left(1-\mathbb{\mathbf{1}}_{X}\right)K_{F}$.
\begin{lem}
\label{lem:The-operators-F_trace}The operator $F_{\mathrm{trace-free}}$
satisfies
\begin{equation}
\mathrm{Tr}^{\flat}\left(F_{\mathrm{trace-free}}^{n}\right)=0,\quad\mbox{and }\label{eq:eqF_tfn}
\end{equation}
\begin{equation}
\left\Vert F_{\mathrm{trace-free}}^{n}\right\Vert \leq C(\lambda/2)^{-rn}\cdot\left(\sup_{l\in K_{0}}\left(e^{\widetilde{V}(l)}|\det(df_{p(l)}|_{l})|^{1/2}\cdot|\det df_{G}^{n}|_{VG(l)}|^{-1/2}\right)\right)\label{eq:norm_F_trace-free}
\end{equation}
for any $n\ge1$, with some $C>0$ independent on $n$ or $N$. The
operator $F_{\mathrm{trace}}$ is of trace class and we have 
\begin{equation}
\left\Vert F_{\mathrm{trace}}\right\Vert _{\mathrm{Tr}}\leq CN^{d},\label{eq:Trace_F_trace}
\end{equation}
with some $C>0$ independent on $N$.\end{lem}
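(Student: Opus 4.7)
The plan is to analyze the decomposition $\widetilde{F}_{N}^{\mathrm{lift}} = F_{\mathrm{trace-free}} + F_{\mathrm{trace}}$ by exploiting the defining property of the set $X$ in (\ref{eq:def_X}): any single application of $F_{\mathrm{trace-free}}$ strictly shrinks the weight $\mathcal{W}_{\hbar}^{r}$ by the factor $(\lambda/2)^{-r}$, while $F_{\mathrm{trace}}$ is concentrated on a bounded-volume tube around the trapped set $\widetilde{K}$. Below $K_{F}(z',z)$ denotes the kernel of $\widetilde{F}_{N}^{\mathrm{lift}}$ and $\mathcal{A}(z)$ stands for its pointwise amplitude $e^{\widetilde{V}(l)}\cdot|\det(Df_{p(l)}|_{l})|^{1/2}\cdot|\det Df_{G}|_{VG(l)}|^{-1}$ (this is the natural factor arising from the definition of $\widetilde{F}$, combined with the Jacobian of the Bargmann transform).

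For claim (\ref{eq:eqF_tfn}), I would iterate the kernel: the Schwartz kernel of $F_{\mathrm{trace-free}}^{n}$ is
\[
K^{(n)}(z_{n},z_{0})=\int\prod_{i=1}^{n}\mathbf{1}_{X}(z_{i},z_{i-1})\,K_{F}(z_{i},z_{i-1})\,dz_{1}\cdots dz_{n-1}.
\]
If all the factors $\mathbf{1}_{X}(z_{i},z_{i-1})$ are non-zero, then chaining the inequalities defining $X$ yields $\mathcal{W}_{\hbar}^{r}(z_{n})\leq(\lambda/2)^{-rn}\mathcal{W}_{\hbar}^{r}(z_{0})$. Specializing to the diagonal $z_{n}=z_{0}$ forces $1\leq(\lambda/2)^{-rn}$, which is impossible for $\lambda>2$. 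Hence $K^{(n)}(z,z)\equiv0$ and the flat trace vanishes.

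For claim (\ref{eq:norm_F_trace-free}), I would estimate the conjugated kernel $\mathcal{W}_{\hbar}^{r}(z')\cdot K_{F}(z',z)\cdot\mathcal{W}_{\hbar}^{r}(z)^{-1}$ on $X$. On $X$ this is bounded by $(\lambda/2)^{-r}$ times the pointwise amplitude $\mathcal{A}(z)$, multiplied by a fast-decay factor controlling the distance $|z'-F(z)|$ on the scale $\sqrt{\hbar}$ (which is encoded in $K_{F}$ and already used in Lemmas \ref{lm:boundedness_of_molifier} and \ref{lm:boundedness_of_molifier2}). Iteration and an application of Schur's inequality (\ref{eq:Schur_inequality}) on the weighted kernel then yield the operator norm bound with the supremum of the Birkhoff products of $\mathcal{A}$ restricted to $K_{0}$; together with the cocycle identity $|\det Df_{G}^{n}|_{VG}|^{-1}=\prod_{i=0}^{n-1}|\det Df_{G}|_{VG}\circ f_{G}^{i}|^{-1}$ this is exactly the bound in (\ref{eq:norm_F_trace-free}). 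The factor $C$ absorbs the Schur constants, which depend on $r$ and the decay rate of $K_{F}$ away from the graph of the canonical map, but not on $n$ or $N$.

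For claim (\ref{eq:Trace_F_trace}), the kernel of $F_{\mathrm{trace}}$ is supported on the complement of $X$, i.e.\ where $\mathcal{W}_{\hbar}^{r}(z')>(\lambda/2)^{-r}\mathcal{W}_{\hbar}^{r}(z)$. Combined with the concentration of $K_{F}(z',z)$ on the graph of the canonical map on scale $\sqrt{\hbar}$, this forces both $z'$ and $z$ to lie in an $O(\sqrt{\hbar})$-neighborhood of the trapped set $\widetilde{K}=\{\zeta=0,\,s=0\}$, whose symplectic volume in $T^{*}G$ is $\mathrm{Vol}_{\omega}(M)$. The trace norm of an integral operator with smooth kernel supported on such a tube is controlled by the integral of its singular values, which by a standard Weyl-type count is bounded by $C\cdot\hbar^{-d}\cdot\mathrm{Vol}_{\omega}(M)\cdot\|\mathcal{A}\|_{\infty}=O(N^{d})$; here I would follow the argument used for (\ref{eq:weyl_law}) and Corollary \ref{cor:XT_pm}, expressing $F_{\mathrm{trace}}$ as an integral of localized rank-one operators via a partition of unity on scale $\sqrt{\hbar}$. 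The main obstacle is the third claim: ensuring that the trace-class bound is uniform in $N$ requires combining the phase-space localization from the cut-off $(1-\mathbf{1}_{X})$ with the finite-rank character of the Bargmann projector restricted to an $O(\sqrt{\hbar})$-tube, and this is where one has to be careful that the weight $\mathcal{W}_{\hbar}^{r}$, which is only of critical order, does not spoil the Schatten estimate.
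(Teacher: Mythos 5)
Your proof of~(\ref{eq:eqF_tfn}) and~(\ref{eq:norm_F_trace-free}) is essentially the same as the paper's: vanishing of the diagonal kernel of $F_{\mathrm{trace-free}}^{n}$ follows by chaining the weight inequality defining $X$, and the norm bound follows by conjugating the kernel by the weight and applying Schur (the paper phrases this slightly differently, first noting the unweighted $L^{2}$ bound and then retaining the extra $(\lambda/2)^{-rn}$ factor, but the mechanism is the same). For the trace-class estimate~(\ref{eq:Trace_F_trace}) you take a genuinely different route. The paper estimates the \emph{diagonal} kernel directly, proving $|K_{F}(z,z)|\le C_{\nu}\langle\hbar^{-1/2}\zeta\rangle^{-\nu}$ by integration by parts, and then bounds $\|F_{\mathrm{trace}}\|_{\mathrm{Tr}}$ by the integral of this diagonal kernel. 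You instead localize $F_{\mathrm{trace}}$ onto an $O(\sqrt{\hbar})$-tube around the trapped set, decompose into rank-one pieces using a partition of unity at scale $\sqrt{\hbar}$, and invoke the Weyl-type count as in Lemma~\ref{lm:trace_basic} and Corollary~\ref{cor:XT_pm}. Your route is actually more rigorous on one point: the inequality $\|T\|_{\mathrm{Tr}}\le\int|K(z,z)|\,dz$ written in the paper is not valid for general (non-positive) operators and needs the pseudo-locality/partition-of-unity argument you sketch to be justified, whereas your decomposition into localized rank-one pieces delivers a genuine Schatten bound directly. The trade-off is that the paper's route is shorter and avoids re-deploying the machinery of Section~\ref{sec:Nonlinear-prequantum-maps} on the phase space of $T^{*}G$. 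One small bookkeeping caution: after iterating your pointwise amplitude $\mathcal{A}(z)$ with the cocycle identity you land on $|\det Df_{G}^{n}|_{VG}|^{-1}$, whereas the stated bound has $|\det Df_{G}^{n}|_{VG}|^{-1/2}$; the missing factor is the half-Jacobian from the $L^{2}$-normalization of the composition operator (what makes $\widetilde{\mathcal{L}}$ unitary on $L^{2}$ in the linear model), and you should make that step explicit rather than fold it silently into $\mathcal{A}$.
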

\begin{proof}
From the definition (\ref{eq:def_X}) of the subset $X$, it is clear
that for any sequence of points $z_{0},z_{1},\ldots z_{n}\in T^{*}M$
with $z_{0}=z_{n}$, we have 
\[
\mathbb{\mathbf{1}}_{X}\left(z_{n}=z_{0},z_{n-1}\right)\cdot\mathbb{\mathbf{1}}_{X}\left(z_{n-1},z_{n-2}\right)\cdots\mathbb{\mathbf{1}}_{X}\left(z_{1},z_{0}\right)=0.
\]
This implies that the Schwartz kernel of $F_{\mathrm{trace-free}}^{n}$
vanishes on the diagonal and hence we have $\mathrm{Tr}^{\flat}\left(F_{\mathrm{trace-free}}^{n}\right)=0$.
For the second claim (\ref{eq:norm_F_trace-free}), we first observe
that, if we consider the action of this operator $F_{\mathrm{trace-free}}^{n}$
with respect to the $L^{2}$-norm (without the escape function $\mathcal{W}_{\hbar}^{r}(\cdot)$),
the operator norm is bounded by the right hand side of (\ref{eq:norm_F_trace-free})
without the term $C(\lambda/2)^{-rn}$. (This is because this is true
for operator $\widetilde{F}_{N}$ and that $\mathcal{B}_{(x,s)}$
and $\mathcal{B}_{(x,s)}^{*}$ does not increase the $L^{2}$ norm.)
Then, taking the escape function $\mathcal{W}_{\hbar}^{r}(\cdot)$
into account and noting the definition of the subset $X$, we retain%
\footnote{For a rigorus proof of this, we can employ an argument similar to
that in the proof of Proposition \ref{pp:bdd_g}, using partition
of unity on the phase space. For the detail we refer \cite{Baladi-Tsujii08}.%
} the factor $C(\lambda/2)^{-rn}$ and obtain (\ref{eq:norm_F_trace-free}).
For (\ref{eq:Trace_F_trace}), we observe that 
\[
\left\Vert F_{\mathrm{trace}}\right\Vert _{\mathrm{Tr}}\leq\int\left|\left(z-\mathbb{\mathbf{1}}_{X}\left(z\right)\right)K_{F}\left(z,z\right)\right|dz
\]
 and then show that, for any $\nu>0$, there exists a constant $C_{\nu}>0$,
which may depend on $f_{G}$ but uniform for sufficiently large $N$,
such that 
\[
|K_{F}(z,z)|\le C_{\nu}\cdot\langle\hbar^{-1/2}\zeta\rangle^{-\nu}.
\]
The last estimate is obtained by expressing the kernel $K_{F}(\cdot)$
as an (oscillatory) integral and then by applying integration by parts.
(See the proofs of Proposition \ref{lm:L_g_Y_almost_identity} and
Proposition \ref{pp:bdd_g} for similar estimates.) Then we obtain
(\ref{eq:Trace_F_trace}).
\end{proof}
We pursue the proof of Lemma \ref{lem:11.8}. Using (\ref{eq:decomp_spectral}),
we write
\[
\mathrm{Tr}^{\flat}\left(\widetilde{F}_{N}^{n}\right)=\mathrm{Tr}\left(\widetilde{\mathcal{F}}_{\hbar}^{n}\right)+\mathrm{Tr}^{\flat}\left(\widetilde{\mathcal{R}}_{\hbar}^{n}\right)
\]
Then (\ref{eq:decomp_trace_free}) gives $\widetilde{\mathcal{R}}_{\hbar}^{\mathrm{lift}}=F_{\mathrm{trace-free}}+\left(F_{\mathrm{trace}}-\widetilde{\mathcal{F}}_{\hbar}^{\mathrm{lift}}\right)$.
We develop accordingly
\[
\left(\widetilde{\mathcal{R}}_{\hbar}^{\mathrm{lift}}\right)^{n}=F_{\mathrm{trace-free}}^{n}+\sum_{k=0}^{n-1}F_{\mathrm{trace-free}}^{k}\left(F_{\mathrm{trace}}-\widetilde{\mathcal{F}}_{\hbar}^{\mathrm{lift}}\right)\left(\widetilde{\mathcal{R}}_{\hbar}^{\mathrm{lift}}\right)^{n-k-1}
\]
From (\ref{eq:eqF_tfn}), we know $\mathrm{Tr}^{\flat}\left(F_{\mathrm{trace-free}}^{n}\right)=0$
for the first term. The operator $\left(F_{\mathrm{trace}}-\widetilde{\mathcal{F}}_{\hbar}^{\mathrm{lift}}\right)$
is in the trace class and using the general fact $\left\Vert AB\right\Vert _{\mathrm{Tr}}\leq\left\Vert A\right\Vert \cdot\left\Vert B\right\Vert _{\mathrm{Tr}}$,
we obtain 
\[
\left|\mathrm{Tr}^{\flat}\left(\widetilde{\mathcal{R}}_{\hbar}^{n}\right)\right|\leq\sum_{k=0}^{n-1}\left\Vert F_{\mathrm{trace-free}}^{k}\right\Vert \left\Vert F_{\mathrm{trace}}-\widetilde{\mathcal{F}}_{\hbar}^{\mathrm{lift}}\right\Vert _{\mathrm{Tr}}\left\Vert \widetilde{\mathcal{R}}_{\hbar}\right\Vert ^{n-k-1}.
\]
From (\ref{eq:Trace_F_trace}) and (\ref{eq:Tr_Fh}), we have
\[
\left\Vert F_{\mathrm{trace}}-\widetilde{\mathcal{F}}_{\hbar}^{\mathrm{lift}}\right\Vert _{\mathrm{Tr}}\leq\left\Vert F_{\mathrm{trace}}\right\Vert _{\mathrm{Tr}}+\left\Vert \widetilde{\mathcal{F}}_{\hbar}\right\Vert _{\mathrm{Tr}}\leq CN^{d}.
\]
By taking large $r$, we may and do assume that the right hand side
of (\ref{eq:norm_F_trace-free}) is bounded by $C(r_{1}^{+})^{n}.$
Using these estimates and (\ref{eq:norm_F1}), we conclude
\[
\left|\mathrm{Tr}^{\flat}\left(\widetilde{\mathcal{R}}_{\hbar}^{n}\right)\right|\leq CN^{d}\cdot n\cdot(r_{1}^{+}+\varepsilon)^{n}.
\]
This implies (\ref{eq:bound_of_TrF1n}). We have finished the proof
of Lemma \ref{lem:11.8}.
\end{proof}
In order to finish the proof of Theorem \ref{Th1.5}, we have to consider
the remaining terms on the right hand side of (\ref{eq:r1}), that
is, 
\[
\left(-1\right)^{k+m}\mathrm{Tr}^{\flat}\left(\widetilde{F}_{\tilde{V},k,m}\right)_{N}^{n}\quad\mbox{for }(k,m)\neq(0,0).
\]
This is actually easier once we have done with the case $(k,m)=(0,0)$.
Recall the definition (\ref{eq:def_F_km-1}) of the operator $\widetilde{F}_{\tilde{V},k,m}$
and observe that the extra term $\left(\Lambda^{k}\left(Df_{/l}^{-1}\right)\otimes\Lambda^{m}\left(D\tilde{f}\mid_{V_{l}G}\right)\right)$
(compared with the case $ $$(k,m)=(0,0)$) is bounded in norm by
$\|Df_{/l}^{-1}\|<1/\lambda<1$. This observation gives the next lemma.
\begin{lem}
Consider the transfer operator $\left(\hat{F}_{\tilde{V},k,m}\right)_{N}$
defined in (\ref{eq:def_F_Nkm}). For any $\varepsilon>0$, there
exists a constant $C_{\epsilon}>0$ and $N_{\epsilon}>0$ such that
\begin{equation}
\left|\mathrm{Tr}^{\flat}\left(\hat{F}_{\tilde{V},k,m}\right)_{N}^{n}\right|\leq C_{\epsilon}N^{d}\cdot(r_{1}^{+}+\varepsilon)^{n}\label{eq:r2}
\end{equation}
holds for any $N>N_{\varepsilon}$ , $n\ge1$ and any $(k,m)\neq(0,0)$.
\end{lem}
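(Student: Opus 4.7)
The plan is to extend the proof of Lemma \ref{lem:trace_flat_and_trace} (via its key ingredient Lemma \ref{lem:11.8}) from the scalar operator $\widetilde{F}_N=(\widehat{F}_{\tilde V,0,0})_N$ to the vector-valued prequantum transfer operator $(\widehat{F}_{\tilde V,k,m})_N$ acting on sections of the bundle $\Lambda^{(k,m)}$. The new input when $(k,m)\neq (0,0)$ is that the bundle-map factor $\Lambda^{k}(Df^{-1}_{|l})\otimes \Lambda^{m}(D\tilde f|_{V_l G})$ contributes an extra exponential contraction rate on the attractor $E_u$, which shrinks the outermost band of the vector-valued operator inside the disk of radius $r_1^{+}+\varepsilon$.

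First, I would set up a vector-valued anisotropic Sobolev space $\widetilde{\mathcal{H}}_N^r(P_G,\Lambda^{(k,m)})$, defined by applying the escape function $\widetilde{\mathcal W}_\hbar^r$ componentwise in smooth local trivializations of $\Lambda^{(k,m)}$ (compatible with the local charts of Proposition \ref{prop:Local-charts-1}). The entire local-to-global machinery of Sections \ref{sec:Nonlinear-prequantum-maps}, \ref{sec:Proofs-of-the_mains} and \ref{sec:2}, being local and essentially linear in the fiber, carries over to this vector-valued setting to yield a band-structure theorem analogous to Theorem \ref{thm:band_structure-1}. The ``effective potential'' that enters the outermost band radius picks up, relative to the scalar case, the correction $\log\|\Lambda^{k}(Df^{-1}_{|l})\otimes \Lambda^{m}(D\tilde f|_{V_l G})\|$ averaged along orbits on the attractor.

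Second, and this is the key computation, I would use the explicit pointwise formula from the proof of Lemma \ref{lem:Gutz_1}: at $l=E_u(x)\in E_u$ the operator $\Lambda^{k}(Df^{-n}_{|l})\otimes\Lambda^{m}(D\tilde f^{n}|_{V_l G})$ reduces (after the symplectic/duality identification of (\ref{eq:isomorphism})--(\ref{eq:Def_Ell})) to $\Lambda^{k}(L^{-(n)})\otimes \Lambda^{m}(L^{-(n)}\otimes L^{-(n)})$ with $L=L_u$ strictly expanding. Its operator norm (on the attractor) is therefore bounded by $C\lambda^{-(k+2m)n}$, and in particular by $C\lambda^{-n}$ as soon as $(k,m)\neq(0,0)$. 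Consequently the outermost band of $(\widehat{F}_{\tilde V,k,m})_N$ lies inside
\[
\bigl\{\,|z|\le \sup_x\bigl(e^{D(x)}\bigr)\cdot\lambda^{-1}+\varepsilon/2\,\bigr\}\subset\bigl\{\,|z|\le r_1^{+}+\varepsilon/2\,\bigr\},
\]
where the last inclusion uses $r_1^{+}\ge \sup_x e^{D(x)}\cdot\|Df|_{E_u}\|_{\min}^{-1}\ge\sup_xe^{D(x)}/\lambda$ coming from (\ref{eq:def_r+_r-}) after taking the hyperbolicity constant $\lambda$ sufficiently large by passing to an iterate.

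Third, with this spectral location in hand, I would repeat verbatim the two-step argument of Lemmas \ref{lem:trace_flat_and_trace} and \ref{lem:11.8}: split $(\widehat{F}_{\tilde V,k,m})_N^n=\mathcal F_{k,m,\hbar}^n+\mathcal R_{k,m,\hbar}^n$ into the finite-rank spectral restriction onto the outermost band and its complement; bound $|\mathrm{Tr}(\mathcal F_{k,m,\hbar}^n)|$ by $(\mathrm{rank}\,\mathcal F_{k,m,\hbar})\cdot(r_1^{+}+\varepsilon/2)^n\le C_\varepsilon N^d(r_1^{+}+\varepsilon)^n$ via the vector-valued Weyl law; and treat $\mathrm{Tr}^{\flat}(\mathcal R_{k,m,\hbar}^n)$ by the same phase-space decomposition into a ``trace-free'' kernel (supported away from the diagonal after $n$ iterations, hence with zero flat trace) and a trace-class ``near-diagonal'' part whose trace norm is $O(N^d)$, as in the proof of Lemma \ref{lem:11.8}.

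The principal obstacle is the bookkeeping in the first step: one must redo the local reduction (Proposition \ref{prop:Product_expression_of_tildeL}) for the vector-valued lift, check that the escape function argument still controls the potentially expanding fiber directions of $\Lambda^k(Df^{-1})$ (this works because, after the Bargmann transform, those expansions appear in cotangent directions where $\widetilde{\mathcal W}_\hbar^r$ already damps), and verify that the outermost-band radius is genuinely governed by the restriction of the bundle-map cocycle to the attractor $E_u$, not by its larger pointwise operator norm on all of $G$. Once these are in place the remainder of the argument is a mechanical vector-valued copy of the scalar proofs already established.
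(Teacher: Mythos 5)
Your proposal lands on the same two ideas as the paper's proof — the cocycle estimate on the attractor for the bundle factor, and the phase-space split into a trace-free part and a trace-class part — but it takes a detour the paper avoids, and the middle step of your argument contains a sign error.

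On the detour: the paper's observation is that the vector-valued extension of Theorem \ref{thm:band_structure-1} already gives the operator-norm bound
\[
\bigl\|(\widetilde{F}_{\tilde V,k,m})_N^n\bigr\|\le C\sup_{x\in M}\Bigl(e^{D_n(x)}\bigl\|\Lambda^k(Df^{-n}_{/l})\otimes\Lambda^m(D\tilde f^n\mid_{V_lG})\bigr\|\Bigr)\le C_\varepsilon\,(r_1^++\varepsilon)^n ,
\]
so for $(k,m)\neq(0,0)$ the \emph{entire} operator already plays the role that the residual $\widetilde{\mathcal R}_\hbar$ played in Lemma \ref{lem:11.8}. One then applies the phase-space decomposition $F_{\mathrm{trace-free}}+F_{\mathrm{trace}}$ directly to $(\widetilde F_{\tilde V,k,m})_N^{\mathrm{lift}}$ and concludes, with no vector-valued Weyl law, no identification of an outermost band, and no spectral split $\mathcal F_{k,m,\hbar}^n+\mathcal R_{k,m,\hbar}^n$. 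That split was needed in the scalar case precisely because the outermost band sat at radius near $1>r_1^+$ and had to be removed before the residual was small; the extra factor $\|Df^n_x|_{E_u}\|_{\min}^{-(k+2m)}$ with $k+2m\ge1$ already pushes the whole spectrum inside the disk of radius $r_1^++\varepsilon$, so the split buys nothing.

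On the sign error: you write $r_1^{+}\ge \sup_x e^{D(x)}\cdot\|Df|_{E_u}\|_{\min}^{-1}\ge\sup_x e^{D(x)}/\lambda$ to justify the inclusion $\{|z|\le\sup_x e^{D(x)}\lambda^{-1}+\varepsilon/2\}\subset\{|z|\le r_1^++\varepsilon/2\}$. Both inequalities run the wrong way. The sequence $a_n:=\sup_x\bigl(e^{D_n(x)}\|Df_x^n|_{E_u}\|_{\min}^{-1}\bigr)$ is submultiplicative, so $r_1^+=\lim_n a_n^{1/n}=\inf_n a_n^{1/n}\le a_1=\sup_x\bigl(e^{D(x)}\|Df_x|_{E_u}\|_{\min}^{-1}\bigr)\le\sup_x e^{D(x)}/\lambda$, since $\|Df_x|_{E_u}\|_{\min}\ge\lambda$; passing to an iterate does not help, as $r_1^+$ and the one-step suprema rescale identically. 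The inclusion you want is nevertheless correct, but should be established directly: since $\|Df_x^n|_{E_u}\|_{\min}\ge1$ and $k+2m\ge1$, one has $\|Df_x^n|_{E_u}\|_{\min}^{-(k+2m)}\le\|Df_x^n|_{E_u}\|_{\min}^{-1}$ pointwise, hence the outermost-band radius of the bundle-valued operator is $\le r_1^+$ with no detour through the one-step quantity $\sup_x e^{D(x)}/\lambda$. In fact, once this direct comparison is in hand you already have the operator-norm bound displayed above, and the paper's shorter route becomes available.
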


\begin{proof}
A trivial extension of Theorem \ref{thm:band_structure-1} to the
vector-valued case, gives the estimate 
\begin{eqnarray*}
\left\Vert \left(\widetilde{F}_{\tilde{V},k,m}^{n}\right)_{N}\right\Vert  & \leq & C\sup_{x\in M}\left(e^{D_{n}(x)}\left\Vert \Lambda^{k}\left(Df_{/l}^{-n}\right)\otimes\Lambda^{m}\left(D\tilde{f}^{n}\mid_{V_{l}G}\right)\right\Vert \right)\le C_{\varepsilon}(r_{1}^{+}+\varepsilon)^{n}
\end{eqnarray*}
where $D_{n}\left(x\right)=\sum_{j=1}^{n}D\left(f_{G}^{j}\left(x\right)\right)$is
the Birkhoff sum of the damping factor $D\left(x\right):=V\left(x\right)-V_{0}\left(x\right)$.
Then the argument in the proof of Lemma \ref{lem:11.8}, using a ``phase
space decomposition'' of $\left(\widetilde{F}_{\tilde{V},k,m}\right)_{N}$
into the trace-free term and the trace class term similar to (\ref{eq:decomp_trace_free}),
enables us to obtain (\ref{eq:r2}).
\end{proof}
Finally, from (\ref{eq:r1}), (\ref{eq:r2}) and (\ref{eq:r3}), we
get
\[
\left|\mathrm{Tr}\left(\mathcal{F}_{\hbar}^{n}\right)-\sum_{x=f^{n}\left(x\right)}\frac{e^{D_{n}\left(x\right)}e^{iS_{n,x}/\hbar}}{\sqrt{\left|\mbox{det}\left(1-Df_{x}^{n}\right)\right|}}\right|\leq C_{\varepsilon}\cdot N^{d}\cdot(r_{1}^{+}+\varepsilon)^{n}.
\]
We have completed the proof of Th \ref{Th1.5}.

\section{\label{sec:6}Proof of Th. \ref{thm:The-quantum-operator_closed_to_Toeplitz}.}

The argument of the proof proceed in two steps. First we prove the
result for linear hyperbolic maps. We obtain an ``exact result''
in that case. Then using the same argument as for the main results
of this paper we have that, using a partition of unity at small size,
we can use the linear case as an approximation and get Theorem \ref{thm:The-quantum-operator_closed_to_Toeplitz},
where the error comes from estimates on non-linearities.

The function $\mathcal{M}\left(x\right)$ in Theorem \ref{thm:The-quantum-operator_closed_to_Toeplitz}
will be obtained in (\ref{eq:M_a}) below with the setting $A=Df\mid_{E_{u}\left(x\right)}$.

\paragraph{The quantum operator:}

We suppose that $f:\mathbb{R}^{2d}\rightarrow\mathbb{R}^{2d}$ is
a linear symplectic hyperbolic map and that we have constructed the
prequantum operator $\hat{F}_{N}=\mathcal{L}_{f}$ as in Section \ref{sub:The-prequantum-transfer_hyperbolic}.
By definition, the quantum operator is the restriction of $\mathcal{L}_{f}$
to the outmost band:
\[
\hat{\mathcal{F}}_{\hbar}:=\mathcal{L}_{f}\mid_{H_{0}'}
\]
The quantum space is:
\[
\mathcal{H}_{\hbar}:=H_{0}'
\]

Proposition 4.9 and Eq. (4.22) give that

\[
\hat{F}_{N}=\mathcal{U}\circ\left(L_{A}\otimes L_{A}\right)\circ\mathcal{U}^{-1}
\]
and
\begin{eqnarray*}
\hat{\mathcal{F}}_{\hbar}=\mathcal{L}_{f}\mid_{H_{0}'} & = & \mathcal{U}\circ\left(L_{A}\otimes\left(T^{\left(0\right)}\circ L_{A}\circ T^{\left(0\right)}\right)\right)\circ\mathcal{U}^{-1}\\
 & = & \left|\mbox{det}\left(A\right)\right|^{-1/2}\cdot\mathcal{U}\circ\left(L_{A}\otimes T^{\left(0\right)}\right)\circ\mathcal{U}^{-1}
\end{eqnarray*}
In the last line we have used the fact that $T^{\left(0\right)}$
has rank one and that $\left(T^{\left(0\right)}L_{A}T^{\left(0\right)}\right)=\left|\mbox{det}\left(A\right)\right|^{-1/2}T^{\left(0\right)}$.

\paragraph{The Laplacian operator:}

We suppose that $g$ is a constant metric on $\mathbb{R}^{2d}$ (but
not necessary the canonical Euclidean metric). From (4.28) the spectral
projector on the first Landau band of the rough Laplacian is 
\[
q_{\hbar}^{\left(0\right)}=\mathcal{U}\circ\left(\mathrm{Id}\otimes Q_{\hbar}^{\left(0\right)}\right)\circ\mathcal{U}^{-1}
\]
where $Q_{\hbar}^{\left(0\right)}$ has rank one. The Toeplitz space
is
\[
\mathcal{H}_{T}:=H_{0}=\mathrm{Im}\left(q_{\hbar}^{\left(0\right)}\right)
\]

\paragraph{The Toeplitz operator:}

By definition, the Toeplitz operator is

\begin{eqnarray*}
\mathcal{F}_{T}: & = & q_{\hbar}^{\left(0\right)}\hat{F}_{N}q_{\hbar}^{\left(0\right)}\mid_{H_{0}}\\
 & = & \mathcal{U}\circ\left(L_{A}\otimes\left(Q_{\hbar}^{\left(0\right)}L_{A}Q_{\hbar}^{\left(0\right)}\right)\right)\circ\mathcal{U}^{-1}
\end{eqnarray*}
Since $Q_{\hbar}^{\left(0\right)}$ is a rank one projector, we may
write: 
\begin{equation}
\left(Q_{\hbar}^{\left(0\right)}L_{A}Q_{\hbar}^{\left(0\right)}\right)=c\left(A\right)\cdot Q_{\hbar}^{\left(0\right)},\qquad c\left(A\right)\in\mathbb{C}\label{eq:def_cA}
\end{equation}
i.e.
\begin{eqnarray*}
\mathcal{F}_{T} & = & c\left(A\right)\cdot\mathcal{U}\circ\left(L_{A}\otimes Q_{\hbar}^{\left(0\right)}\right)\circ\mathcal{U}^{-1}\\
 & = & c\left(A\right)\cdot\mathcal{U}\circ\left(L_{A}\otimes\mathrm{Id}\mid_{\mathrm{Im}\left(Q^{\left(0\right)}\right)}\right)\circ\mathcal{U}^{-1}
\end{eqnarray*}

\paragraph{Isomorphism:}

We have seen in Lemma 3.23 that
\[
Q^{\left(0\right)}:\mathrm{Im}\left(T^{\left(0\right)}\right)\rightarrow\mathrm{Im}\left(Q^{\left(0\right)}\right)
\]
\[
T^{\left(0\right)}:\mathrm{Im}\left(Q^{\left(0\right)}\right)\rightarrow\mathrm{Im}\left(T^{\left(0\right)}\right)
\]
are bijective and invertible. Let
\[
\Phi:=\mathcal{U}\circ\left(\mathrm{Id}\otimes Q_{\hbar}^{\left(0\right)}\right)\circ\mathcal{U}^{-1}\quad:\mathcal{H}_{h}\rightarrow\mathcal{H}_{T}
\]
which is invertible. (We may have used $T^{\left(0\right)}$ instead).
Then from the previous expressions we deduce that
\begin{eqnarray*}
\Phi\mathcal{F}_{\hbar}\Phi^{-1} & = & \left|\mbox{det}\left(A\right)\right|^{-1/2}\cdot\mathcal{U}\circ\left(L_{A}\otimes\left(Q^{\left(0\right)}T^{\left(0\right)}\left(Q^{\left(0\right)}\right)^{-1}\right)\right)\circ\mathcal{U}^{-1}
\end{eqnarray*}
but $T^{\left(0\right)}\mid_{\mathrm{Im}\left(T^{\left(0\right)}\right)}=\mathrm{Id}$
hence
\begin{eqnarray*}
\Phi\mathcal{F}_{\hbar}\Phi^{-1} & = & \left|\mbox{det}\left(A\right)\right|^{-1/2}\cdot\mathcal{U}\circ\left(L_{A}\otimes\mathrm{Id}\mid_{\mathrm{Im}\left(Q^{\left(0\right)}\right)}\right)\circ\mathcal{U}^{-1}\\
 & = & \left|\mbox{det}\left(A\right)\right|^{-1/2}c\left(A\right)^{-1}\cdot\mathcal{F}_{T}\\
 & = & e^{\mathcal{M}}\mathcal{F}_{T}
\end{eqnarray*}
with
\begin{equation}
\mathcal{M}=-\frac{1}{2}\log\left|\mbox{det}\left(A\right)\right|-\log c\left(A\right)\label{eq:M_a}
\end{equation}

In the next Section we compute the constant $c\left(A\right)$.

\paragraph{Computation of $Q_{\hbar}^{\left(0\right)}$}

We recall that we have coordinates $\left(\zeta_{p},\zeta_{q}\right)\in\mathbb{R}^{2d}$
on $T^{*}\mathbb{R}_{\zeta_{p}}^{d}$.

The metric $g$ compatible with $\omega$ is characterized%
\footnote{The natural construction of $g$ and $J$ is the following. At every
point $x\in M$ the symplectic structure $\omega$ extends to a symplectic
structure on the complexified tangent space $T_{x}M^{\mathbb{C}}$.
Let us choose a non real Lagrangian subspace $W_{x}\subset T_{x}M^{\mathbb{C}}$
i.e. such that $\omega\left(W_{x},W_{x}\right)=0$, $\mbox{dim}_{\mathbb{C}}W_{x}=d$
and $W_{x}\oplus\overline{W}_{x}=T_{x}M^{\mathbb{C}}$. Then $W_{x}$
defines a complex structure $J_{x}$ on $T_{x}M^{\mathbb{C}}$ by
the requirement that if $u\in T_{x}M^{\mathbb{C}}$ decomposes as
$u=u_{W}+u_{\overline{W}}$ with $u_{W}\in W_{x}$, $u_{\overline{W}}=\overline{u_{W}}\in\overline{W}_{x}$
then $J_{x}u=iu_{W}-i\overline{u_{W}}$. In other words, $T_{x}M^{\mathbb{C}}=W_{x}\oplus\overline{W}_{x}$
is the spectral decomposition of the operator $J$ with respective
eigenvalues $i,-i$. It is clear that $J_{x}^{2}=-\mbox{Id}$ and
that $\omega\left(Ju,Jv\right)=\omega\left(u,v\right)$ (it is enough
to check this with $u\in W_{x}$,$v\in\overline{W}_{x}$). The space
of such $W_{x}$ is called the \textbf{Siegel generalized Upper half
plane. }It is the homogeneous space $\mathcal{H}_{d}=\mbox{Sp}_{2d}\left(\mathbb{R}\right)/\mbox{U}_{d}$.
Ref: \cite[p.62]{carter}\cite[p.89,p.93]{woodhouse2}. In dimension
$d=1$, $\mathcal{H}_{1}=\mbox{Sp}_{2}\left(\mathbb{R}\right)/\mbox{U}_{1}=SL_{2}\left(\mathbb{R}\right)/SO_{2}$
is the Poincaré disk. In dimension $d=1$ it clear that every complex
structure $J$ is compatible with $\omega$ because every one-dimensional
subspace $W$ is Lagrangian. Finally we require that the non degenerate
symmetric form $g\left(u,v\right):=\omega\left(u,Jv\right)$ is positive
definite. (It is easy to check that it is symmetric: 
\[
g\left(v,u\right)=\omega\left(v,Ju\right)=-\omega\left(Ju,v\right)=-\omega\left(J^{2}u,Jv\right)=-\omega\left(-u,Jv\right)=g\left(u,v\right)
\]
} by a complex Lagrangian linear subspace $W\subset\left(T^{*}\mathbb{R}_{\zeta_{p}}^{d}\right)^{\mathbb{C}}$.

A Lagrangian linear subspace $W\subset\left(T^{*}\mathbb{R}_{\zeta_{p}}^{d}\right)^{\mathbb{C}}$
is uniquely characterized by its ``generating function'', a quadratic
function $S_{W}:\mathbb{R}_{\zeta_{p}}^{d}\rightarrow\mathbb{C}$,
\[
S_{W}\left(\zeta_{p}\right)=\frac{1}{2}\langle\zeta_{p}|\mathbf{W}\zeta_{p}\rangle
\]
where $\mathbf{W}\in\mathrm{Sym}\left(\mathbb{C}^{d^{2}}\right)$
is a symmetric $d\times d$ complex matrix. Precisely $W$ is given
by the graph of the differential $dS_{W}$:
\[
\zeta_{q}=\left(dS_{W}\right)\left(\zeta_{p}\right)=\mathbf{W}\zeta_{p}
\]
We associate a quadratic WKB function
\[
\varphi_{W}\left(\zeta_{p}\right):=\exp\left(\frac{i}{\hbar}S_{W}\left(\zeta_{p}\right)\right)=\exp\left(\frac{i}{\hbar}\frac{1}{2}\langle\zeta_{p}|\mathbf{W}\zeta_{p}\rangle\right)
\]
This function belongs to $\mathrm{Im}\left(Q_{\hbar}^{\left(0\right)}\right)$,
hence
\[
Q_{\hbar}^{\left(0\right)}=\frac{|\varphi_{W}\rangle\langle\varphi_{W}|}{\langle\varphi_{W}|\varphi_{W}\rangle}
\]

Example: if $d=1$ and $J$ is the standard complex structure $J\frac{\partial}{\partial z}=i\frac{\partial}{\partial z}$
with $z=\zeta_{p}+i\zeta_{q}$ then $\mathbf{W}=i$ and $\varphi_{W}\left(\zeta_{p}\right)=\exp\left(-\frac{1}{2\hbar}\zeta_{p}^{2}\right)$.

\paragraph{Computation of $c\left(A\right)$}

Eq.(\ref{eq:def_cA}) implies that
\[
c\left(A\right)=\frac{\langle\varphi_{W}|L_{A}\varphi_{W}\rangle}{\langle\varphi_{W}|\varphi_{W}\rangle}
\]
where
\begin{eqnarray*}
\langle\varphi_{W}|\varphi_{W}\rangle & = & \int_{\mathbb{R}^{d}}\exp\left(-\frac{i}{\hbar}\frac{1}{2}\langle x,\overline{\mathbf{W}}x\rangle\right)\exp\left(\frac{i}{\hbar}\frac{1}{2}\langle x,\mathbf{W}x\rangle\right)dx\\
 & = & \int_{\mathbb{R}^{d}}\exp\left(-\frac{1}{\hbar}\langle x,\mathrm{Im}\left(\mathbf{W}\right)x\rangle\right)dx
\end{eqnarray*}

Recall the operator $L_{A}$ acting on $u\in\mathcal{S}\left(\mathbb{R}_{\zeta_{p}}^{d}\right)$
is given by
\[
L_{A}u=\left|\mbox{det}\left(A\right)\right|^{-1/2}.u\circ A^{-1}
\]
Hence
\begin{eqnarray*}
c\left(A\right) & = & \left|\mbox{det}\left(A\right)\right|^{-1/2}\frac{\langle\varphi_{W}|\varphi_{W}\circ A^{-1}\rangle}{\langle\varphi_{W}|\varphi_{W}\rangle}
\end{eqnarray*}
We have that
\begin{eqnarray*}
\langle\varphi_{W}|\varphi_{W}\circ A^{-1}\rangle & = & \int_{\mathbb{R}^{d}}\exp\left(-\frac{i}{\hbar}\frac{1}{2}\langle x,\overline{\mathbf{W}}x\rangle\right)\exp\left(\frac{i}{\hbar}\frac{1}{2}\langle A^{-1}x,\mathbf{W}A^{-1}x\rangle\right)dx\\
 & = & \int_{\mathbb{R}^{d}}\exp\left(-\frac{1}{\hbar}\langle x,\mathbf{W'}x\rangle\right)dx
\end{eqnarray*}
with
\[
\mathbf{W}'=\frac{i}{2}\left(\overline{\mathbf{W}}-{}^{t}A^{-1}\mathbf{W}A^{-1}\right)
\]

We recall the Gaussian integral formula in $\mathbb{R}^{D}$:
\begin{equation}
\int_{\mathbb{R}^{D}}e^{-\frac{1}{2}\left\langle y|Ay\right\rangle +b.y}dy=\sqrt{\frac{\left(2\pi\right)^{D}}{\mbox{det}A}}\exp\left(\frac{1}{2}\left\langle b|A^{-1}b\right\rangle \right),\qquad b\in\mathbb{C}^{D},A\in\mathcal{L}\left(\mathbb{R}^{D}\right)\label{eq:Gaussian_integrale}
\end{equation}
giving
\[
\langle\varphi_{W}|\varphi_{W}\rangle=\sqrt{\frac{\left(2\pi\right)^{d}}{\mbox{det}\frac{2}{\hbar}\mathrm{Im}\left(\mathbf{W}\right)}}=\sqrt{\frac{\left(\pi\hbar\right)^{d}}{\mbox{det}\mathrm{Im}\left(\mathbf{W}\right)}}
\]
\[
\langle\varphi_{W}|\varphi_{W}\circ A^{-1}\rangle=\sqrt{\frac{\left(\pi\hbar\right)^{d}}{\mbox{det}\mathrm{Im}\left(\mathbf{W}'\right)}}
\]
hence
\[
c\left(A\right)=\left(\frac{\mbox{det}\mathrm{Im}\left(\mathbf{W}\right)}{\left|\mbox{det}\left(A\right)\right|\mbox{det}\mathrm{Im}\left(\mathbf{W}'\right)}\right)^{1/2}
\]

\appendix

\section{\label{sec:Appendix}Appendix}

\subsection{\label{sub:Proof-of-Theorem_bundle}Proof of Theorem \ref{thm:Bundle-P_map_f_tilde}}

Under Assumption 1 on page \pageref{eq:integral_assumption-1}, existence
of a $\mathbf{U}(1)$-principal bundle $\pi:P\rightarrow M$ with
a connection $A$ satisfying the condition (\ref{eq:Curvature_omega})
is standard in differential geometry. See \cite{kostant_70}\cite[prop 8.3.1]{woodhouse2}.
Notice that the connection one form $A$ satisfying (\ref{eq:Curvature_omega})
is determined up to addition by a connection $A_{0}$ with $dA_{0}=0$
i.e. a flat connection. Below we choose a connection appropriately
so that the second claim in Theorem \ref{thm:Bundle-P_map_f_tilde}
holds true. We first prove the following lemma. 

\begin{framed}%
\begin{lem}
\label{pro:map_f_tilde-1}Let $\pi:P\rightarrow M$ be a prequantum
bundle over a closed symplectic manifold $(M,\omega)$ with a connection
1-form $A$ such that $dA=-i\left(2\pi\right)\left(\pi^{*}\omega\right)$.
Let $f:M\rightarrow M$ be a diffeomorphism. The following conditions
are equivalent
\begin{enumerate}
\item There exists an equivariant lift $\tilde{f}:P\rightarrow P$ preserving
the connection.
\item For any closed path $\gamma\subset M$, we have
\begin{equation}
h_{A}\left(f\left(\gamma\right)\right)=h_{A}\left(\gamma\right)\label{eq:condition_holonomy-1}
\end{equation}
where $h_{A}\left(\gamma\right)\in\mathbf{U}(1)$ denotes the holonomy
along $\gamma$ (with respect to the connection $A$).
\item $f$ preserves $\omega$ (i.e. $f^{*}\omega=\omega$) and the homomorphism
\begin{equation}
r_{A}:H_{1}\left(M,\mathbb{Z}\right)\rightarrow\mathbf{U}(1),\qquad r_{A}\left(\left[\gamma\right]\right)=\frac{h_{A}\left(f\left(\gamma\right)\right)}{h_{A}\left(\gamma\right)}\label{eq:map_r-1}
\end{equation}
(which is well-defined if $f^{*}\omega=\omega$ holds true) is trivial:
\begin{equation}
r_{A}\equiv1\label{eq:condition_on_r-1}
\end{equation}

\end{enumerate}

The equivariant lift $\tilde{f}$ as above is unique up to a global
phase (if it exists): $\tilde{g}$ is another equivariant lift if
and only if there exists $e^{i\theta_{0}}\in\mathbf{U}(1)$ such that
$\tilde{g}=e^{i\theta_{0}}\tilde{f}$.\end{lem}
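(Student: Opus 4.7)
The plan is to establish the equivalences via $(1)\Rightarrow(2)$, $(2)\Leftrightarrow(3)$, and $(2)\Rightarrow(1)$, then handle uniqueness separately. The first implication is immediate: if $\tilde f$ preserves $A$, it sends horizontal vectors to horizontal vectors, so if $\tilde\gamma$ is the horizontal lift from $p_0\in\pi^{-1}(\gamma(0))$ of a closed loop $\gamma$ on $M$, then $\tilde f\circ\tilde\gamma$ is the horizontal lift of $f(\gamma)$ starting at $\tilde f(p_0)$. Comparing endpoints yields $\tilde f(h_A(\gamma)\cdot p_0)=h_A(f(\gamma))\cdot\tilde f(p_0)$, and equivariance of $\tilde f$ rewrites the left side as $h_A(\gamma)\cdot\tilde f(p_0)$, giving (\ref{eq:condition_holonomy-1}).

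The equivalence $(2)\Leftrightarrow(3)$ comes from the holonomy formula for prequantum connections. For a small disk $D\subset M$ with $\partial D=\gamma$, Stokes applied to $dA=-2\pi i\,\pi^{*}\omega$ gives $h_A(\gamma)=\exp(-2\pi i\int_D\omega)$ and similarly $h_A(f(\gamma))=\exp(-2\pi i\int_D f^{*}\omega)$; so (2) applied to contractible loops forces $\int_D(\omega-f^{*}\omega)\in\mathbb{Z}$ for arbitrarily small $D$, hence $f^{*}\omega=\omega$ pointwise. Once this holds, the quotient $h_A(f(\gamma))/h_A(\gamma)$ depends only on $[\gamma]\in H_1(M,\mathbb{Z})$ (the disk argument kills the dependence on a bounding surface), so $r_A$ of (\ref{eq:map_r-1}) is a well-defined homomorphism, and its triviality is precisely (2) reduced to homology classes. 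Conversely, $f^{*}\omega=\omega$ together with $r_A\equiv 1$ recovers (\ref{eq:condition_holonomy-1}) for every closed loop.

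The heart of the argument is the construction $(2)\Rightarrow(1)$, which is where the main bookkeeping obstacle lies. Fix $m_0\in M$, $p_0\in\pi^{-1}(m_0)$, pick any $q_0\in\pi^{-1}(f(m_0))$ (this is precisely the global-phase freedom in the uniqueness statement), and set $\tilde f(p_0):=q_0$. For an arbitrary $p\in P$ with $\pi(p)=m$, choose a smooth path $\gamma$ from $m_0$ to $m$, horizontally lift it from $p_0$ to get $\tilde\gamma$, write $p=e^{i\theta}\tilde\gamma(1)$, and define
\[
\tilde f(p):=e^{i\theta}\cdot\widetilde{f(\gamma)}(1),
\]
where $\widetilde{f(\gamma)}$ is the horizontal lift of $f(\gamma)$ starting at $q_0$. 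Independence of the choice of $\gamma$ reduces, on comparing two paths $\gamma,\gamma'$ with difference loop $\delta=\gamma\cdot(\gamma')^{-1}$, to the equality $h_A(f(\delta))=h_A(\delta)$, which is exactly (\ref{eq:condition_holonomy-1}). Smoothness of $\tilde f$ follows from smooth dependence of horizontal transport on its initial data; equivariance and $\pi\circ\tilde f=f\circ\pi$ are immediate from the construction; and $\tilde f^{*}A=A$ holds because horizontal vectors are sent to horizontal vectors. For uniqueness, given a second equivariant connection-preserving lift $\tilde g$, the map $\Psi:=\tilde g\circ\tilde f^{-1}$ is an equivariant bundle automorphism of $P$ covering $\mathrm{Id}_M$, hence $\Psi(p)=c(\pi(p))\cdot p$ for some smooth $c:M\to\mathbf{U}(1)$; the condition $\Psi^{*}A=A$ evaluated on horizontal vectors gives $dc\equiv 0$, and connectedness of $M$ forces $c$ to be a constant $e^{i\theta_0}$, so $\tilde g=e^{i\theta_0}\tilde f$.
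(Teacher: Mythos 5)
Your proof takes essentially the same route as the paper: you construct $\tilde f$ from a base point by horizontal lifts and verify well-definedness via condition (2), and you deduce the equivalence $(2)\Leftrightarrow(3)$ from the curvature/Stokes expression for the holonomy of contractible loops. Two places where you are slightly more complete than the paper's own proof are worth noting. In the $(2)\Leftrightarrow(3)$ step you correctly observe that $h_A(f(\gamma))=h_A(\gamma)$ for contractible $\gamma=\partial D$ only gives $\int_D(\omega-f^*\omega)\in\mathbb{Z}$, and that one must shrink $D$ to conclude $f^*\omega=\omega$ pointwise; the paper elides this. You also supply an explicit uniqueness argument via the gauge transformation $\Psi=\tilde g\circ\tilde f^{-1}$ (an equivariant automorphism covering $\mathrm{Id}_M$, hence multiplication by some $c:M\to\mathbf{U}(1)$, with $\Psi^*A=A$ forcing $dc=0$ and connectedness forcing $c$ constant); the paper states uniqueness in the lemma but does not prove it.
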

\end{framed}
\begin{proof}
The proof of Lemma \ref{pro:map_f_tilde-1} can be found in \cite[Prop 2.2 p.632]{zelditch_quantum_maps_05}.
We give it here since some details of the proof will be useful later
on. The idea of the proof is illustrated in Figure \ref{fig:Picture_lift-1}.

\begin{figure}[h]
\begin{centering}
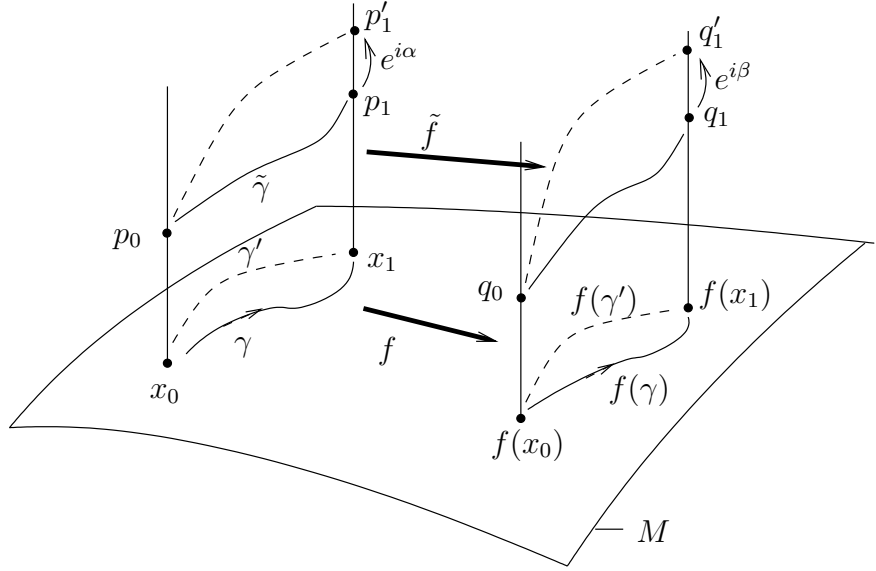
\par\end{centering}

\caption{\label{fig:Picture_lift-1}Picture of $\tilde{f}$ construction.}
\end{figure}

The assertion (1)$\Rightarrow$(2) is obvious because holonomy is
defined from the connection preserved by $\tilde{f}$ hence holonomy
of closed paths is preserved by $f$. 

To prove the assertion (2)$\Rightarrow$(1) we construct $\tilde{f}$
explicitly. Let $p_{0}\in P$ and $x_{0}=\pi\left(p_{0}\right)\in M$
be some given points of reference. We choose $q_{0}\in P_{f\left(x_{0}\right)}$
an arbitrary point in the fiber $P_{f\left(x_{0}\right)}$ and set
$\tilde{f}\left(p_{0}\right)=q_{0}$.  By equivariance, this defines
$\tilde{f}$ on the fiber $P_{x_{0}}$: for any $e^{i\theta}\in\mathbf{U}(1)$,
we have to set $\tilde{f}\left(e^{i\theta}p_{0}\right)=e^{i\theta}q_{0}$.
Let $x_{1}\in M$ be any point. We want to define $\tilde{f}$ on
the fiber $P_{x_{1}}$. We choose a path $\gamma:\left[0,1\right]\rightarrow M$
which joins $\gamma\left(0\right)=x_{0}$ to $\gamma\left(1\right)=x_{1}$
and then take the unique horizontal lift%
\footnote{By definition, $\tilde{\gamma}\in P$ is a \emph{horizontal lift}
of the path $\gamma\left(t\right)\in M$ if $\pi\left(\tilde{\gamma}\left(t\right)\right)=\gamma\left(t\right)$
and if the tangent vector is horizontal at every point: $A\left(\frac{d\tilde{\gamma}}{dt}\right)=0$.
It does not depend on the parametrization of $\gamma$.%
}$\tilde{\gamma}:\left[0,1\right]\rightarrow P$ of $\gamma$ such
that $\tilde{\gamma}\left(0\right)=p_{0}$. Put $p_{1}:=\tilde{\gamma}\left(1\right)\in P_{x_{1}}$.
Next let $\widetilde{f\left(\gamma\right)}$ be the unique horizontal
lift of $f\left(\gamma\right)$ such that $\widetilde{f\left(\gamma\right)}\left(0\right)=q_{0}$.
Since $\tilde{f}$ preserves the connection, it sends $\tilde{\gamma}$
to this horizontal lift $\widetilde{f\left(\gamma\right)}$ of $f\left(\gamma\right)$.
We define $\tilde{f}\left(p_{1}\right)=q_{1}:=\widetilde{f\left(\gamma\right)}\left(1\right)\in P_{f\left(x_{1}\right)}$.
For equivariance, we define $\tilde{f}$ on the fiber $P_{x_{1}}$
so that $\tilde{f}\left(e^{i\theta}p_{1}\right)=e^{i\theta}q_{1}$
for any $e^{i\theta}\in\mathbf{U}(1)$.

The definition of $\tilde{f}$ described above depends a priori on
the choice of the path $\gamma$. We check now that the condition
(2) guarantees the well definiteness (or independence of the choice
of the path $\gamma$) of this definition. Suppose that $\gamma'$
is another path such that $\gamma'\left(0\right)=x_{0}$ and $\gamma'\left(1\right)=x_{1}$
and that we define $p'_{1}\in P_{x_{1}}$ and $q'_{1}\in P_{f\left(x_{1}\right)}$
in the similar manner as above using $\tilde{\gamma}$ in the place
of $\gamma$. Then we have $p'_{1}=e^{i\alpha}p_{1}$ for some $e^{i\alpha}\in\mathbf{U}(1)$
and $q_{1}^{'}=e^{i\beta}q_{1}$ for some $e^{i\beta}\in\mathbf{U}(1)$.
From the definition above, we have $\tilde{f}\left(p'_{1}\right)=\tilde{f}\left(e^{i\alpha}p_{1}\right)=e^{i\alpha}\tilde{f}\left(p_{1}\right)=e^{i\alpha}q_{1}$.
For well definiteness, we have to check that $q'_{1}=\tilde{f}\left(p'_{1}\right)$
or, equivalently, that $e^{i\alpha}=e^{i\beta}$. Note that $\Gamma:=\gamma'\circ\gamma^{-1}$
is a closed path with holonomy%
\footnote{By definition, the \emph{holonomy} of a closed path $\Gamma\left(t\right)\in M$,
$\Gamma\left(1\right)=\Gamma\left(0\right)$ is $h\left(\Gamma\right)\in\mathbf{U}(1)$
computed as follows. We construct $\tilde{\Gamma}\left(t\right)\in P$,
a horizontal lift of $\Gamma$ and write $\tilde{\Gamma}\left(1\right)=e^{ih\left(\Gamma\right)}\tilde{\Gamma}\left(0\right)\in\pi^{-1}\left(\Gamma\left(0\right)\right)$.%
} $h_{A}\left(\Gamma\right)=e^{i\alpha}$ and $f\left(\Gamma\right)=f\left(\gamma'\right)\circ f\left(\gamma\right)^{-1}$
has holonomy $h_{A}\left(f\left(\Gamma\right)\right)=e^{i\beta}$.
Therefore the required condition $e^{i\alpha}=e^{i\beta}$ is equivalent
to the condition (2). By construction $\tilde{f}$ preserves the horizontal
bundle hence the connection $A$. We have obtained (1).

Let us show that (2) and (3) are equivalent. Let $\gamma=\partial\sigma$
be a closed path which borders a surface $\sigma\subset M$ i.e. $\left[\gamma\right]=0$
in $H_{1}\left(M,\mathbb{Z}\right)$. The curvature formula \cite{demailly_livre_1}
gives the holonomy as
\[
h_{A}\left(\gamma\right)=\exp\left(-i2\pi\int_{\sigma}\omega\right).
\]
Also 
\[
h_{A}\left(f\left(\gamma\right)\right)=\exp\left(-i2\pi\int_{f\left(\sigma\right)}\omega\right)=\exp\left(-i2\pi\int_{\sigma}f^{*}\omega\right).
\]
The condition $h_{A}\left(f\left(\gamma\right)\right)=h_{A}\left(\gamma\right)$
for any closed path $\gamma=\partial\sigma$ as above is therefore
equivalent to the local condition $f^{*}\omega=\omega$. In that case,
for any closed paths $\gamma$ and $\gamma'$ such that $\left[\gamma\right]=\left[\gamma'\right]\in H_{1}\left(M,\mathbb{Z}\right)$,
we have $h_{A}\left(f\left(\gamma\right)\circ f\left(\gamma'\right)^{-1}\right)=h_{A}\left(\gamma\circ(\gamma')^{-1}\right)$,
and hence
\[
\frac{h_{A}\left(f\left(\gamma'\right)\right)}{h_{A}\left(\gamma'\right)}=\frac{h_{A}\left(f\left(\gamma\right)\right)}{h_{A}\left(\gamma\right)}.
\]
Therefore the map (\ref{eq:map_r-1}) is well defined. Now the equivalence
of the conditions (2) and (3) is obvious.
\end{proof}
The next lemma gives the choice of the connection in the latter statement
of Theorem \ref{thm:Bundle-P_map_f_tilde}.

\begin{framed}%
\begin{lem}
\textbf{\label{lm:choice_of_flat_connection} }Let $\pi:P\rightarrow M$
be a prequantum bundle over a closed symplectic manifold $\left(M,\omega\right)$
with connection 1-form $A$ such that $dA=-i\left(2\pi\right)\left(\pi^{*}\omega\right)$.
Let $f:M\rightarrow M$ a symplectic diffeomorphism and $f_{*}:H_{1}\left(M,\mathbb{R}\right)\rightarrow H_{1}\left(M,\mathbb{R}\right)$
the linear map induced in the homology group. If Assumption 2 on page
\pageref{eq:integral_assumption-1} holds, there exists a flat connection
$A_{0}$ such that (\ref{eq:condition_on_r-1}) holds for the modified
connection $A+A_{0}$. \end{lem}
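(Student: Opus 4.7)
The plan is to realise the correction $A_{0}$ as $-i(2\pi)\pi^{*}\alpha$ for a suitable closed real $1$-form $\alpha$ on $M$, and to reduce the desired property (\ref{eq:condition_on_r-1}) for $A+A_{0}$ to a cohomological equation on $M$ that is solvable precisely thanks to Assumption 2.

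First I would observe that if $A'=A+A_{0}$ is to be another connection satisfying both (\ref{eq:Curvature_omega}) and the normalization (\ref{eq:normalization_A}), then $A_{0}$ must be an equivariant closed $1$-form on $P$ that vanishes on $\partial/\partial\theta$; consequently it descends to $M$ and can be written $A_{0}=-i(2\pi)\pi^{*}\alpha$ with $\alpha\in\Omega^{1}(M)$ closed. A direct computation of horizontal lifts in the trivialization (\ref{eq:local_trivialization-1}), using $T_{\alpha}^{*}A_{0}=-i(2\pi)\alpha$, then shows that for every closed loop $\gamma\subset M$,
\[
h_{A+A_{0}}(\gamma)=h_{A}(\gamma)\cdot\exp\!\left(i2\pi\oint_{\gamma}\alpha\right),
\]
and hence, plugging into the definition (\ref{eq:map_r-1}),
\[
r_{A+A_{0}}([\gamma])=r_{A}([\gamma])\cdot\exp\!\left(i2\pi\oint_{\gamma}(f^{*}\alpha-\alpha)\right).
\]
The task therefore reduces to choosing a closed $\alpha\in\Omega^{1}(M)$ whose periods satisfy $\oint_{\gamma}(f^{*}\alpha-\alpha)\equiv-\rho([\gamma])\pmod{\mathbb{Z}}$ for every $[\gamma]\in H_{1}(M,\mathbb{Z})$, where $\rho\in\operatorname{Hom}(H_{1}(M,\mathbb{Z}),\mathbb{R}/\mathbb{Z})$ is the logarithm $r_{A}=\exp(i2\pi\rho)$.

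The main obstacle is that $r_{A}$ takes values in $\mathbf{U}(1)\cong\mathbb{R}/\mathbb{Z}$, whereas de Rham periods land in $\mathbb{R}$: one must lift $\rho$ to an honest real-valued homomorphism before one can represent it by a closed form. This is where both parts of Assumption 2 enter. The torsion-freeness of $H_{1}(M,\mathbb{Z})$ means it is free abelian, so any homomorphism to $\mathbb{R}/\mathbb{Z}$ lifts to a homomorphism $\tilde{\rho}:H_{1}(M,\mathbb{Z})\to\mathbb{R}$; extending $\mathbb{R}$-linearly and identifying $\operatorname{Hom}(H_{1}(M,\mathbb{R}),\mathbb{R})\cong H^{1}(M,\mathbb{R})$ via de Rham's theorem produces a class $[\beta]\in H^{1}(M,\mathbb{R})$ with $\oint_{\gamma}\beta=\tilde{\rho}([\gamma])$. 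The second part of Assumption 2 says that $1$ is not an eigenvalue of $f_{*}$ on $H_{1}(M,\mathbb{R})$, equivalently not an eigenvalue of $f^{*}$ on $H^{1}(M,\mathbb{R})$, so $f^{*}-\mathrm{id}$ is invertible on $H^{1}(M,\mathbb{R})$ and I can find $[\alpha]\in H^{1}(M,\mathbb{R})$ with $(f^{*}-\mathrm{id})[\alpha]=-[\beta]$.

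Finally I would pick any closed representative $\alpha$ of this class: then $f^{*}\alpha-\alpha=-\beta+dg$ for some $g\in C^{\infty}(M)$, and since $\oint_{\gamma}dg=0$ on closed loops, $\oint_{\gamma}(f^{*}\alpha-\alpha)=-\tilde{\rho}([\gamma])\equiv-\rho([\gamma])\pmod{\mathbb{Z}}$, which kills the correction factor and yields $r_{A+A_{0}}\equiv1$. The connection $A_{0}=-i(2\pi)\pi^{*}\alpha$ is visibly flat since $d\alpha=0$, completing the proof. The only genuinely substantive step is the invertibility of $f^{*}-\mathrm{id}$; everything else is bookkeeping about the way closed $1$-forms alter holonomy and the standard identification between real cohomology and real characters of the first homology.
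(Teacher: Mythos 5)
Your proof is correct and follows the same approach as the paper: both reduce the problem to solving $r_{A+A_0}=1$ via the relation between the period map of the flat correction and $r_A$, lifting $r_A$ to a real-valued homomorphism using torsion-freeness and inverting $f_*-\mathrm{id}$ on $H_1(M,\mathbb{R})$. Your version is more explicit about realizing the flat connection as $-i(2\pi)\pi^*\alpha$ for a closed $1$-form $\alpha$ and about the de Rham identification, where the paper simply asserts that the desired period map $P_{A_0}=-R_A(f_*-I)^{-1}$ can be realized; the substance is identical.
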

\end{framed}
\begin{proof}
If $A_{0}$ is a flat connection (i.e. $dA_{0}=0$) let $A'=A+A_{0}$
be a modified connection (we assume $A_{0}\left(\frac{\partial}{\partial\theta}\right)$
to ensure (\ref{eq:normalization_A})). For a closed path $\gamma$
the modified holonomy is
\[
h_{A'}\left(\gamma\right)=h_{A}\left(\gamma\right)\cdot h_{A_{0}}\left(\gamma\right).
\]
We have a well-defined homomorphism $P_{A_{0}}:H_{1}\left(M,\mathbb{Z}\right)\rightarrow\mathbb{R}/\mathbb{Z}$,
called the \emph{period map}, such that 
\begin{equation}
h_{A_{0}}\left(\gamma\right)=e^{i2\pi P_{A_{0}}\left(\gamma\right)}.\label{eq:period_map-1}
\end{equation}
Suppose that $f$ is symplectic i.e. $f^{*}\omega=\omega$. For the
connections $A$ and $A'=A+A_{0}$, we have the relation:
\[
\frac{h_{A'}\left(f\left(\gamma\right)\right)}{h_{A'}\left(\gamma\right)}=\frac{h_{A}\left(f\left(\gamma\right)\right)}{h_{A}\left(\gamma\right)}\frac{h_{A_{0}}\left(f\left(\gamma\right)\right)}{h_{A_{0}}\left(\gamma\right)}
\]
or, in terms of the maps (\ref{eq:map_r-1}) and (\ref{eq:period_map-1}),
it gives 
\[
r_{A'}=r_{A}\exp\left(i2\pi\left(P_{A_{0}}\left(f_{*}-I\right)\right)\right).
\]
Hence if we choose the flat connection $A_{0}$ so that 
\[
\exp\left(i2\pi\left(P_{A_{0}}\left(f_{*}-I\right)\right)\right)=r_{A}^{-1},
\]
then the condition (\ref{eq:condition_on_r-1}) is realized for the
modified connection $A'=A+A_{0}$. From Assumption 2, this is possible.
Indeed, if we can write $r_{A}=e^{i2\pi R_{A}}$ with $R_{A}:H_{1}\left(M,\mathbb{Z}\right)\rightarrow\mathbb{R}$
and choose a flat connection $A_{0}$ so that $P_{A_{0}}=-R_{A}\left(f_{*}-I\right)^{-1}$.
\end{proof}
From Lemma \ref{pro:map_f_tilde-1} and Lemma \ref{lm:choice_of_flat_connection}
there exists an equivariant lifted map $\tilde{f}:P\rightarrow P$
, which is unique up to a global phase. This proves Theorem \ref{thm:Bundle-P_map_f_tilde}.
\begin{rem}
The results of this Section may be expressed more clearly as follows. 
\begin{enumerate}
\item The symplectic form $\omega$ with the integral Assumption \ref{eq:integral_assumption-1}
page \pageref{eq:integral_assumption-1} defines a family of principal
bundles $P_{A}\rightarrow M$ which are parametrized by a flat connection
$A\in\mathcal{A}$. The space $\mathcal{A}$ of flat connections is
an affine space of finite dimension modeled on $H^{1}\left(M\right)$
(a torus).
\item The map $f:M\rightarrow M$ can be lifted without assumption on the
family $P_{A},A\in\mathcal{A}$ giving a map $\tilde{f}:P_{A}\rightarrow P_{f^{\bullet}\left(A\right)}$
between bundles with an induced map $f^{\bullet}$ on $\mathcal{A}$.
Corollary \ref{lm:choice_of_flat_connection} above was to find a
flat connection $A_{0}$ which is a fixed point $A_{0}=f^{\bullet}\left(A_{0}\right)$
so that we get a lifted map on a unique bundle $\tilde{f}:P_{A_{0}}\rightarrow P_{A_{0}}$.
For this we need the additional assumption on $f^{\bullet}$.
\end{enumerate}
In the example of the Arnold cat map (\ref{eq:f0_cat_map}) of $M=\mathbb{T}^{2}$,
then $\mathcal{A}=\mathbb{T}^{2}$ is also a torus (sometimes called
Floquet parameters). For example in \cite[eq.(2.1)]{debievre96} they
use the notation $\kappa=\left(\kappa_{1},\kappa_{2}\right)\in\left[0,2\pi\right]^{2}\equiv\mathcal{A}$,
the map $f^{\bullet}:\mathcal{A}\rightarrow\mathcal{A}$ is given
in \cite[eq.(6.4)]{debievre96}.
\end{rem}

\subsection{Proof of Lemma \ref{lm:pi_P}\label{sub:appendix2}}

It is enough to show that there exist constants $C_{1}>0$ and $C_{2}>0$
independent of $\hbar$ such that 
\begin{equation}
\|(\rho-\checktau_{\hbar}^{(k)})^{-1}\|_{\mathcal{H}_{\hbar}^{r}(P)}\le\frac{C_{1}}{\min\{|\rho|,|1-\rho|\}}\label{eqn:c1}
\end{equation}
 whenever $\rho\in\complex$ satisfies 
\begin{equation}
\min\{|\rho|,|1-\rho|\}\ge C_{2}\hbar^{\epsilon}.\label{eq:ass_rho}
\end{equation}
 In fact, the estimate (\ref{eqn:c1}) would imply that, for $r_{0}=C_{2}\hbar^{\epsilon}$,
\begin{align*}
\|\checktau_{\hbar}^{(k)}-\hattau_{\hbar}^{(k)}\|_{\mathcal{H}_{\hbar}^{r}} & =\left\Vert \int_{|\rho-1|=r_{0},|\rho|=r_{0}}\rho(\rho-\checktau_{\hbar}^{(k)})^{-1}d\rho-\int_{|\rho-1|=r_{0}}(\rho-\checktau_{\hbar}^{(k)})^{-1}d\rho\right\Vert _{\mathcal{H}_{\hbar}^{r}(P)}\\
 & \le\left\Vert \int_{|\rho|=r_{0}}\rho(\rho-\checktau_{\hbar}^{(k)})^{-1}d\rho\right\Vert _{\mathcal{H}_{\hbar}^{r}}+\left\Vert \int_{|\rho-1|=r_{0}}(\rho-1)(\rho-\checktau_{\hbar}^{(k)})^{-1}d\rho\right\Vert _{\mathcal{H}_{\hbar}^{r}(P)}\\
 & \le2C_{1}\cdot r_{0}=2C_{1}\cdot C_{2}\cdot\hbar^{\epsilon}.
\end{align*}
 To prove (\ref{eqn:c1}), we may and do assume $|\rho|\le2\|\checktau_{\hbar}^{(k)}\|_{\mathcal{H}_{\hbar}^{r}\left(P\right)}$
because the claim is trivial otherwise. Take $u\in\mathcal{H}_{\hbar}^{r}(P)$
arbitrarily. From the assumption made in the preceding sentence, we
have 
\begin{align*}
\|\rho^{2}u-\checktau_{\hbar}^{(k)}\circ\checktau_{\hbar}^{(k)}u\|_{\mathcal{H}_{\hbar}^{r}(P)} & \le\|\rho^{2}u-\rho\checktau_{\hbar}^{(k)}u\|_{\mathcal{H}_{\hbar}^{r}(P)}+\|\checktau_{\hbar}^{(k)}(\rho u-\checktau_{\hbar}^{(k)}u)\|_{\mathcal{H}_{\hbar}^{r}(P)}\\
 & \le(|\rho|+\|\checktau_{\hbar}^{(k)}\|_{\mathcal{H}_{\hbar}^{r}(P)})\cdot\|(\rho-\checktau_{\hbar}^{(k)})u\|_{\mathcal{H}_{\hbar}^{r}(P)}\\
 & \le3\|\checktau_{\hbar}^{(k)}\|_{\mathcal{H}_{\hbar}^{r}(P)}\cdot\|(\rho-\checktau_{\hbar}^{(k)})u\|_{\mathcal{H}_{\hbar}^{r}(P)}.
\end{align*}
 On the other hand, from (\ref{eq:check_tau_almost_projection}),
we have 
\begin{align*}
\|\rho^{2}u-\checktau_{\hbar}^{(k)}\circ\checktau_{\hbar}^{(k)}u\|_{\mathcal{H}_{\hbar}^{r}(P)} & \ge\|\rho^{2}u-\checktau_{\hbar}^{(k)}u\|_{\mathcal{H}_{\hbar}^{r}(P)}-C\cdot\hbar^{\epsilon}\|u\|_{\mathcal{H}_{\hbar}^{r}(P)}\\
 & \ge|\rho(\rho-1)|\cdot\|u\|_{\mathcal{H}_{\hbar}^{r}(P)}-\|(\rho-\checktau_{\hbar}^{(k)})u\|_{\mathcal{H}_{\hbar}^{r}(P)}-C\cdot\hbar^{\epsilon}\|u\|_{\mathcal{H}_{\hbar}^{r}(P)}.
\end{align*}
 Hence we obtain the estimate 
\[
\left(|\rho(\rho-1)|-C'\cdot\hbar^{\epsilon}\right)\cdot\|u\|_{\mathcal{H}_{\hbar}^{r}}\le(3\|\checktau_{\hbar}^{(k)}\|_{\mathcal{H}_{\hbar}^{r}(P)}+1)\cdot\|(\rho-\checktau_{\hbar}^{(k)})u\|_{\mathcal{H}_{\hbar}^{r}(P)}
\]
 for some constant $C'>0$. If we choose $C_{2}$ so large that $C_{2}>4C'$,
the assumption (\ref{eq:ass_rho}) implies 
\[
|\rho(\rho-1)|-C\cdot\hbar^{\epsilon}\ge\frac{1}{2}\min\{|\rho|,|1-\rho|\}-C\cdot\hbar^{\epsilon}\ge\frac{1}{4}\min\{|\rho|,|1-\rho|\}.
\]
 Therefore, with such choice of $C_{2}$, the inequality (\ref{eqn:c1})
holds if we let 
\[
C_{1}>4\cdot(3\|\checktau_{\hbar}^{(k)}\|_{\mathcal{H}_{\hbar}^{r}(P)}+1),
\]
 recalling that $\|\checktau_{\hbar}^{(k)}\|_{\mathcal{H}_{\hbar}^{r}(P)}$
is bounded by a constant independent of $\hbar$.

To prove the second inequality on the trace norm, we observe, using
Lemma \ref{lem:trace_norm_tau},
\begin{align*}
 & \|\checktau_{\hbar}^{(k)}-\hattau_{\hbar}^{(k)}\|_{tr}\le\|\checktau_{\hbar}^{(k)}\circ\checktau_{\hbar}^{(k)}-\hattau_{\hbar}^{(k)}\|_{tr}+\|\checktau_{\hbar}^{(k)}\circ\checktau_{\hbar}^{(k)}-\checktau_{\hbar}^{(k)}\|_{tr}\\
 & \le\left\Vert \check{\tau}_{\hbar}^{(k)}\circ\left(\int_{|\rho-1|=r_{0}}(\rho-(1/\rho))(\rho-\checktau_{\hbar}^{(k)})^{-1}d\rho\right)-\check{\tau}_{\hbar}^{(k)}\circ\int_{|\rho|=r_{0}}\rho(\rho-\checktau_{\hbar}^{(k)})^{-1}d\rho\right\Vert _{tr}+C\hbar^{-d+\epsilon}\\
 & \le C\hbar^{-d+\epsilon}.
\end{align*}
\[
\]

\bibliographystyle{plain}
\bibliography{Article}

\begin{thebibliography}{10}

\bibitem{Aebischer}
B.~Aebischer, M.~Borer, M.~K{\"a}lin, Ch. Leuenberger, and H.~M. Reimann.
\newblock {\em Symplectic geometry}, volume 124 of {\em Progress in
  Mathematics}.
\newblock Birkh\"auser Verlag, Basel, 1994.

\bibitem{nalini_anantharam_2010}
N.~Anantharaman.
\newblock Spectral deviations for the damped wave equation.
\newblock {\em Geometric And Functional Analysis}, 20(3):593--626, 2010.

\bibitem{arnold-avez}
V.I. Arnold and A.~Avez.
\newblock {\em M\'ethodes ergodiques de la m\'ecanique classique}.
\newblock Paris: Gauthier Villars, 1967.

\bibitem{atiyah_67}
M.~F. Atiyah and R.~Bott.
\newblock A {L}efschetz fixed point formula for elliptic complexes. {I}.
\newblock {\em Ann. of Math. (2)}, 86:374--407, 1967.

\bibitem{Baladi05}
V.~Baladi.
\newblock Anisotropic {S}obolev spaces and dynamical transfer operators:
  {$C^\infty$} foliations.
\newblock In {\em Algebraic and topological dynamics}, volume 385 of {\em
  Contemp. Math.}, pages 123--135. Amer. Math. Soc., Providence, RI, 2005.

\bibitem{baladi_05}
V.~Baladi and M.~Tsujii.
\newblock {Anisotropic H\"older and Sobolev spaces for hyperbolic
  diffeomorphisms}.
\newblock {\em Ann. Inst. Fourier}, 57:127--154, 2007.

\bibitem{Baladi-Tsujii08}
V.~Baladi and M.~Tsujii.
\newblock Dynamical determinants and spectrum for hyperbolic diffeomorphisms.
\newblock In {\em Geometric and probabilistic structures in dynamics}, volume
  469 of {\em Contemp. Math.}, pages 29--68. Amer. Math. Soc., Providence, RI,
  2008.

\bibitem{berline_book_92}
N.~Berline, E.~Getzler, and M.~Vergne.
\newblock {\em Heat kernels and Dirac operators}.
\newblock Springer Verlag, 1992.

\bibitem{berman_sjostrand_2005_asymptotics}
R.~Berman and J.~Sj{\"o}strand.
\newblock Asymptotics for bergman-hodge kernels for high powers of complex line
  bundles.
\newblock {\em arXiv preprint math/0511158}, 2005.

\bibitem{liverani_02}
M.~Blank, G.~Keller, and C.~Liverani.
\newblock {Ruelle-Perron-Frobenius spectrum for Anosov maps}.
\newblock {\em Nonlinearity}, 15:1905--1973, 2002.

\bibitem{debievre96}
A~Bouzouina and S~De~Bievre.
\newblock Equipartition of the eigenfunctions of quantized ergodic maps on the
  torus.
\newblock {\em Communications in mathematical physics}, 178(1):83--105, 1996.

\bibitem{cordoba_78}
A.~C{\'o}rdoba and C.~Fefferman.
\newblock Wave packets and fourier integral operators.
\newblock {\em Communications in Partial Differential Equations},
  3(11):979--1005, 1978.

\bibitem{demailly_livre_1}
J.P. Demailly.
\newblock {\em Complex analytic and algebraic geometry}.
\newblock http://www-fourier.ujf-grenoble.fr/\textasciitilde demailly.

\bibitem{dolgopyat_02}
D.~Dolgopyat.
\newblock On mixing properties of compact group extensions of hyperbolic
  systems.
\newblock {\em Israel J. Math.}, 130:157--205, 2002.

\bibitem{duistermaat_FIO_96}
J.J. Duistermaat.
\newblock {\em Fourier integral operators}, volume 130.
\newblock Birkhauser, 1996.

\bibitem{DunfordSchwartz1}
N.~Dunford and J.~T. Schwartz.
\newblock {\em Linear operators. {P}art {I}}.
\newblock Wiley Classics Library. John Wiley \& Sons Inc., New York, 1988.
\newblock General theory, With the assistance of William G. Bade and Robert G.
  Bartle, Reprint of the 1958 original, A Wiley-Interscience Publication.

\bibitem{fred-PreQ-06}
F.~Faure.
\newblock Prequantum chaos: Resonances of the prequantum cat map.
\newblock {\em Arxiv:0606063. Journal of Modern Dynamics}, 1(2):255--285, 2007.

\bibitem{fred-trace-06}
F.~Faure.
\newblock Semiclassical formula beyond the ehrenfest time in quantum chaos.(i)
  trace formula.
\newblock {\em Annales de l'Institut Fourier, No.7.}, 57:2525--2599, 2007.

\bibitem{fred-roy-sjostrand-07}
F.~Faure, N.~Roy, and J.~Sj{\"o}strand.
\newblock A semiclassical approach for anosov diffeomorphisms and ruelle
  resonances.
\newblock {\em Open Math. Journal. (arXiv:0802.1780)}, 1:35--81, 2008.

\bibitem{fred_flow_09}
F.~Faure and J.~Sj{\"o}strand.
\newblock Upper bound on the density of ruelle resonances for anosov flows. a
  semiclassical approach.
\newblock {\em Comm. in Math. Physics, Issue 2. (arXiv:1003.0513v1)},
  308:325--364, 2011.

\bibitem{faure-tsujii_anosov_flows_13}
F.~Faure and M.~Tsujii.
\newblock Spectrum and zeta function of contact anosov flows.
\newblock {\em paper in preparation.}

\bibitem{faure_tsujii_band_CRAS_2013}
Fr{\'e}d{\'e}ric Faure and Masato Tsujii.
\newblock Band structure of the ruelle spectrum of contact anosov flows.
\newblock {\em arXiv preprint arXiv:1301.5525}, 2013.

\bibitem{folland-88}
G.~Folland.
\newblock {\em Harmonic Analysis in phase space}.
\newblock Princeton University Press, 1988.

\bibitem{gohberg-00}
I.~Gohberg, S.~Goldberg, and N.~Krupnik.
\newblock {\em Traces and Determinants of Linear Operators}.
\newblock Birkhauser, 2000.

\bibitem{liverani_04}
S.~Gou{\"e}zel and C.~Liverani.
\newblock Banach spaces adapted to anosov systems.
\newblock {\em Ergodic Theory and dynamical systems}, 26:189--217, 2005.

\bibitem{guillemin_1988_laplace}
V.~Guillemin and A.~Uribe.
\newblock The laplace operator on the n--th tensor power of a line bundle:
  eigenvalues which are bounded uniformly in n.
\newblock {\em Asymptotic Anal}, 1:105--113, 1988.

\bibitem{hall99}
{Hall, B.C. }.
\newblock {Holomorphic Methods in Mathematical Physics}.
\newblock {\em Contemp. Math.}, 260:1--59, 1999.

\bibitem{sjostrand_87}
B.~Helffer and J.~Sj{\"o}strand.
\newblock R\'esonances en limite semi-classique. (resonances in semi-classical
  limit).
\newblock {\em Memoires de la S.M.F.}, 24/25, 1986.

\bibitem{kato_book}
T.~Kato.
\newblock Perturbation theory for linear operators. 1966.
\newblock {\em New York}.

\bibitem{katok_hasselblatt}
A.~Katok and B.~Hasselblatt.
\newblock {\em Introduction to the Modern Theory of Dynamical Systems}.
\newblock Cambridge University Press, 1995.

\bibitem{kostant_70}
B.~Kostant.
\newblock Quantization and unitary representations.
\newblock {\em Lectures in modern analysis and applications III}, pages
  87--208, 1970.

\bibitem{liverani_contact_04}
C.~Liverani.
\newblock On contact {A}nosov flows.
\newblock {\em Ann. of Math. (2)}, 159(3):1275--1312, 2004.

\bibitem{ma_02_spin_C}
X.~Ma and G.~Marinescu.
\newblock The spin \^{}$\backslash$bfc dirac operator on high tensor powers of
  a line bundle.
\newblock {\em Mathematische Zeitschrift}, 240(3):651--664, 2002.

\bibitem{ma_08_toeplitz}
X.~Ma and G.~Marinescu.
\newblock Toeplitz operators on symplectic manifolds.
\newblock {\em Journal of Geometric Analysis}, 18(2):565--611, 2008.

\bibitem{marklof-05}
J.~Marklof and S.~O'Keefe.
\newblock Weyl's law and quantum ergodicity for maps with divided phase space.
\newblock {\em Nonlinearity}, 18:277--304, 2005.

\bibitem{martinez-01}
A.~Martinez.
\newblock {\em An Introduction to Semiclassical and Microlocal Analysis}.
\newblock Universitext. New York, NY: Springer, 2002.

\bibitem{mac_duff_98}
D~McDuff and D~Salamon.
\newblock {\em Introduction to symplectic topology, 2nd edition}.
\newblock clarendon press, Oxford, 1998.

\bibitem{paatero_book_69}
Rolf~Herman Nevanlinna and Veikko Paatero.
\newblock {\em Introduction to complex analysis}.
\newblock American Mathematical Society, 1969.

\bibitem{nonnenmacher_08}
S.~Nonnenmacher.
\newblock Some open questions in `wave chaos'.
\newblock {\em Nonlinearity}, 21(8):T113--T121, 2008.

\bibitem{perelomov1}
A.~Perelomov.
\newblock {\em {Generalized coherent states and their applications.}}
\newblock Springer-Verlag., 1986.

\bibitem{pesin_04}
Y.~Pesin.
\newblock {\em Lectures on Partial Hyperbolicity and Stable Ergodicity}.
\newblock European Mathematical Society, 2004.

\bibitem{reed-simon4}
M.~Reed and B.~Simon.
\newblock {\em Mathematical methods in physics, vol IV : Analysis of
  operators}.
\newblock Academic Press, 1978.

\bibitem{Ruelle1991}
D.~Ruelle.
\newblock An extension of the theory of {F}redholm determinants.
\newblock {\em Inst. Hautes \'Etudes Sci. Publ. Math.}, (72):175--193 (1991),
  1990.

\bibitem{rugh_92}
H.~H. Rugh.
\newblock {The correlation spectrum for hyperbolic analytic maps}.
\newblock {\em Nonlinearity}, 5(6):1237--1263, 1992.

\bibitem{da_silva_01}
A.~Cannas~Da Salva.
\newblock {\em Lectures on Symplectic Geometry}.
\newblock Springer, 2001.

\bibitem{carter}
C.~Segal.
\newblock {\em Lectures on Lie groups and Lie Algebras}.
\newblock 1995.

\bibitem{selberg_1956}
A.~Selberg.
\newblock Harmonic analysis and discontinuous groups in weakly symmetric
  riemannian spaces with applications to dirichlet series.
\newblock {\em J. Indian Math. Soc}, 20(956):47--87, 1956.

\bibitem{hislop_sigal_book_1996}
I.M. Sigal.
\newblock {\em Introduction to spectral theory: With applications to
  Schr{\\"o}dinger operators}, volume 113.
\newblock Springer, 1996.

\bibitem{sjostrand_2000}
J.~Sj{\"o}strand.
\newblock Asymptotic distribution of eigenfrequencies for damped wave
  equations.
\newblock {\em Publ. Res. Inst. Math. Sci}, 36(5):573--611, 2000.

\bibitem{taylor_tome1}
M.~Taylor.
\newblock {\em Partial differential equations, Vol I}.
\newblock Springer, 1996.

\bibitem{taylor_tome2}
M.~Taylor.
\newblock {\em Partial differential equations, Vol II}.
\newblock Springer, 1996.

\bibitem{tsujii_08}
M.~Tsujii.
\newblock Quasi-compactness of transfer operators for contact anosov flows.
\newblock {\em Nonlinearity, arXiv:0806.0732v2 [math.DS]}, 23(7):1495--1545,
  2010.

\bibitem{tsujii_FBI_10}
M.~Tsujii.
\newblock Contact anosov flows and the fourier--bros--iagolnitzer transform.
\newblock {\em Ergodic theory and dynamical systems}, 32(06):2083--2118, 2012.

\bibitem{tuynman_book_supermanifolds_04}
G.~M. Tuynman.
\newblock {\em Supermanifolds and supergroups}, volume 570 of {\em Mathematics
  and its Applications}.
\newblock Kluwer Academic Publishers, Dordrecht, 2004.
\newblock Basic theory.

\bibitem{woodhouse2}
N.M.J. Woodhouse.
\newblock {\em Geometric quantization}.
\newblock Clarendon Press, Oxford, 1992.

\bibitem{zelditch_quantum_maps_05}
S.~Zelditch.
\newblock Quantum maps and automorphisms.
\newblock In {\em The breadth of symplectic and {P}oisson geometry}, volume 232
  of {\em Progr. Math.}, pages 623--654. Birkh\"auser Boston, Boston, MA, 2005.

\bibitem{zworski-03}
M.~Zworski.
\newblock {\em Semiclassical analysis}, volume 138.
\newblock American Mathematical Soc., 2012.

\end{thebibliography}

\end{document}